\DeclareSymbolFont{greek}{OML}{cmr}{m}{n}  
\DeclareMathSymbol{\varrho}{0}{greek}{"25}
\newtheorem{theorem}{Theorem}[section]
\newtheorem{corollary}[theorem]{Corollary}
\newtheorem{proposition}[theorem]{Proposition}
\newtheorem{lemma}[theorem]{Lemma}
\theoremstyle{definition}
\newtheorem{definition}[theorem]{Definition}
\theoremstyle{remark}
\newtheorem{remark}[theorem]{Remark}
\newtheorem{remarks}[theorem]{Remarks}
\numberwithin{section}{chapter}
\numberwithin{equation}{section}
\title{The ${\mathscr P}(\varphi)_2$ Model on the de Sitter Space}
\author[J.~Barata]{Jo\~{a}o C.A.~Barata}
\address{
Universidade de S\~ao Paulo (USP), Brasil}
\email{jbarata@if.usp.br}
\author[C.\ J\"akel]{Christian D.\ J\"akel}
\address{
Universidade de S\~ao Paulo (USP), Brasil}
\email{christian.jaekel@icloud.com}
\author[J. Mund]{Jens Mund}
\address{Departamento de Fisica\\
Universidade de Juiz de Fora, Brasil}
\email{mund@fisica.ufjf.br}
\begin{document}

\subjclass{Primary 35L10; Secondary 32A50}

\keywords{De Sitter Space, Unitary Irreducible Representations, 
Fourier-Helgason Transformation, (Constructive) Quantum Field Theory}

\thanks{The second author is grateful to 
\begin{itemize}
\item[$ i.)$] Walter Wreszinski, for his kind invitation and to the Universidade 
de S\~{a}o Paulo (USP), Brasil, for the hospitality provided during a visiting professorship in 
2004--2005 and during an extended visit in 2009--2010;
\item[$ ii.)$] Ricardo Baeza, for his support and to the Instituto de Matematica y Fisica, Universidad de Talca, 
Chile, for providing perfect working conditions;
\item[$ iii.)$] IHES, Bures-sur-Yvette, France, for generous hospitality offered  during extended visits in 2007 
and 2008. 
\end{itemize}
The third author acknowledges support by CNPq.  The second and the third author were supported by FAPESP}

\dedicatory{Dedication text (use \\[2pt] for line break if necessary)}

\begin{abstract}
In 1975 Figari, H\o egh-Krohn and Nappi~\cite{FHN} constructed the ${\mathscr P}(\varphi)_2$ model on the two-dimensional de Sitter space. 
Here we complement their work with a number of new results. In particular, we show that

\begin{itemize}
\item[$i.)$] the unitary irreducible representations of $SO_0(1,2)$ for both the principal and the complementary series 
can be formulated on the Hilbert space formed by wave functions supported on the Cauchy surface;   
\item[$ii.)$] for $m> - \tfrac{1}{2 r^2}$ physical infrared problems\footnote{As shown in ~\cite{FHN}, the ultraviolet problems
can be resolved in the same manner as on flat Minkowski space. The number $r>0$ is the radius of the time-zero circle 
in de Sitter space.}
 are absent on de Sitter space;
\item[$iii.)$] the interacting quantum fields satisfy the \emph{equations} of motion in their covariant form; 
\item[$iv.)$] the Haag-Kastler and the time-slice axiom hold true. 
In fact, one can choose an arbitrary space-like geodesic and require that 
the local von Neumann algebras for all double cones with bases on this specific geodesic
are the same for the both free and the interacting theory; 
\item[$v.)$] the \emph{generators} of the boosts and the rotation for the interacting quantum field theory 
arise by contracting the \emph{stress-energy tensor} with the relevant \emph{Killing vector fields} and integrating 
over the relevant line segments. They generate 
a reducible, unitary \emph{representation} of the \emph{Lorentz group} on the Fock space for the free field.
\end{itemize}
In addition, we provide a detailed discussion of the causality structure of de Sitter space and a brief review of 
the representation theory of $O(1,2)$. We describe the free classical dynamical system in both its covariant and canonical form, 
and present the associated quantum one-particle KMS structures.  The ${\mathscr P}(\varphi)_2$ interaction is added on the Euclidean 
sphere and the Osterwalder-Schrader reconstruction is carried out in some detail. 
\end{abstract}

\maketitle

\tableofcontents
\addcontentsline{toc}{chapter}{Abstract\hfill}

\chapter*{List of Symbols}

\quad

\smallskip

{\bf Space-Time}

\bigskip

$(\mathbb{R}^{1+2}, \mathbb{g})$ \hfill  Minkowski space-time in 1+2 dimensions \quad \pageref{dSpage}

$(dS, g)$ \hfill  de Sitter space-time $dS \subset \mathbb{R}^{1+2}$ \quad \pageref{dSpage}

$\mathbb{g}$ \hfill  metric  on Minkowski space $\mathbb{R}^{1+2}$   \quad \pageref{metricpage}

$g = \mathbb{g}_{\upharpoonright dS} $ \hfill  metric  restricted to $dS$     \quad \pageref{metricpage}

${\mathcal C}$ \hfill  a Cauchy surface   \quad \pageref{ccCpage}

$S^1$ \hfill  time-zero circle   \quad \pageref{ccCpage}

$I_+$ \hfill  the half-circle $W _1 \cap S^1$      \quad \pageref{halfcirclepage}

$V^+$  \hfill  forward light-cone in $\mathbb{R}^{1+2}$ \quad \pageref{vpluspage}

$\Gamma^\pm (x)$  \hfill  future  and past of a space-time point $x \in dS$
 \quad \pageref{vpluspage}

${\mathcal O}$ \hfill  a open, bounded space-time region 
 \quad \pageref{cOpage}

${\mathcal O}'$ \hfill  space-like complement of  ${\mathcal O} \subset dS$
 \quad \pageref{cOpage}
 
$W$ \hfill   a wedge in $dS$ \quad \pageref{tWpage}

$W _1$ \hfill  the wedge $\bigl\{ x \in dS \mid x_2 > \vert x_0 \vert \bigr\} $  \quad \pageref{tWpage}

${\mathbb{W}}$ \hfill  the double wedge $W \cup W'$ \quad \pageref{dWpage}

${\mathcal O}_I$ \hfill the double-cone $I''$, with basis $I\subset S^1$   
 \quad \pageref{doubleconepage}

$H^\pm_m$ \hfill  mass hyperboloid in $\mathbb{R}^{1+2}$ \quad \pageref{masshyperboloid}

$P_\tau$  \hfill  horosphere \quad \pageref{horosphere}
	
$W^{(\alpha)}$ \hfill  the wedge $R_0(\alpha)W_1$ \quad \pageref{kldypage}

$g_{\upharpoonright S^1} $ \hfill  metric  restricted to $S^1$   \quad \pageref{seinspage}

${\rm d} l (\psi)$  \hfill  induced surface element on $S^1$ \quad \pageref{dlpage}

\bigskip

\goodbreak
{\bf De Sitter Group}

\smallskip

$O(1,2)$  \hfill de Sitter group, \emph{i.e.}, Lorentz group in 1+2 dimensions \quad \pageref{metricpage}

$SO_0(1,2)$  \hfill  proper, orthochronous de Sitter group \quad \pageref{metricpage}

$R_0$  \hfill  a rotation around the $x_0$-axis  \quad \pageref{rropage}

$\Lambda$ \hfill  an arbitrary element in $SO_0(1,2)$  \quad \pageref{lambdapage}

$\Lambda_1 (t)$ \hfill   the boost which leaves $W_1$ invariant  \quad \pageref{Lambdaalpha}

$\Lambda^{(\alpha)} (t)$ \hfill   the boost which leaves $W^{(\alpha)}$ invariant  \quad \pageref{Lambdaalpha}

$K_0$, $L_1$, $L_2$ \hfill generators of Lorentz transformations \quad \pageref{Lambdaalpha}

$L^{(\alpha)}$ \hfill  generator of the boost $t \mapsto \Lambda^{(\alpha)} (t)$  \quad \pageref{Lambdaalpha}

$T \colon  \mathbb{R}^3 \to \mathbb{R}^3$ \hfill time reflection \quad \pageref{timerefl-page}

$P_1 \colon \mathbb{R}^3 \to \mathbb{R}^3$ \hfill parity reflection \quad \pageref{timerefl-page}

$\Theta_{W}$ \hfill reflection at the edge of the wedge $W$ \quad \pageref{PTwedgepage}

\bigskip
\goodbreak
{\bf Test-Functions on $dS$}

\smallskip

${\mathcal D}_{\mathbb{R}} (dS) $ \hfill  real $C^\infty$-functions with compact support on  $ dS $ \quad \pageref{ccCpage}

$f$, $g$ \hfill  elements of ${\mathcal D}_{\mathbb{R}} (dS) $ \quad \pageref{ccCpage}

\bigskip

{\bf Unitary irreducible representations of $SO_0(1,2)$}

\smallskip

$K_0$, $L_1$, $L_2$ \hfill generators of $SO_0(1,2)$
on $L^2 (\partial V^+, \frac{{\rm d} \alpha}{2 \pi} {\rm d} p_0)$
\quad \pageref{bargmangeneratorpage}

$M$ \hfill  the Casimir operator on the light cone
\quad \pageref{casimir}

$p = (p_0, p_1, p_2) \in \partial V^+$ \hfill coordinates on the light-cone
\quad \pageref{bargmangeneratorpage}

$S (S+1) \, \Phi= \mu^2 \, \Phi $ \hfill KG equation on the light-cone $\partial V^+$
\quad \pageref{lightconecoordinateKGpage}

$\widetilde u (\Lambda)$ \hfill  UIR of $SO_0(1,2)$ 
on $\widetilde {{\mathfrak h}}  (\partial V^+)$ \quad \pageref{umLambdapage}

$\Gamma$ \hfill a path on the forward light cone~$\partial V^+$ \quad \pageref{Gammacpluspage}

${\rm d} \mu_\Gamma$  \hfill  restriction of~${\rm d} \mu_{\partial V^+}$  to a path $\Gamma \subset \partial V^+$ 
\quad \pageref{Gammacpluspage}

\bigskip

{\bf (Pseudo-)Differential Operators}

\smallskip

$\square_{dS}+\mu^2$ \hfill Klein--Gordon operator \quad \pageref{squarepage}

$n$ \hfill future
pointing normal vector field ${\tt  n}(t,\psi) = \,  {\rm cos}_\psi ^{-1} \,\partial_t$ \quad \pageref{vnpage}

$\varepsilon$ \hfill generator of the boosts $t \mapsto \Lambda_1(t)$ \quad \pageref{epsilonpage}

$ {\rm cos}_\psi$ \hfill multiplication operator by  $\cos\psi$ \quad \pageref{cosinepsipage}

\bigskip

\goodbreak
{\bf Covariant Dynamical System}

\smallskip

$\sigma$ \hfill  symplectic form associated to  ${\mathcal E}$  \quad \pageref{sigmapage}

${\mathscr E}$ \hfill  the commutator function for the Klein--Gordon equation  \quad \pageref{ccCfpage}

${\mathfrak u} (\Lambda)$ \hfill  representation of $O(1,2)$  on  $({\mathfrak k}(dS), \sigma)$  \quad \pageref{ccTpage}

${\mathfrak k}(dS)$ \hfill  space of solutions of the Klein--Gordon equation   \quad \pageref{kldypage}

$\mathbb{\Phi}$  \hfill a solution of the Klein--Gordon equation (an element  in ${\mathfrak k}(dS)$)  \quad \pageref{kldypage}

${\mathbb P}$ \hfill  projection from ${\mathcal D}_{\mathbb{R}} (dS) $ to $ {\mathfrak k}(dS)$  \quad \pageref{kldypage}

$\mathbb{f}$ \hfill  solution of the KG equation for $f  \in {\mathcal D}_{\mathbb{R}} (dS)$
\quad \pageref{kldypage}

\bigskip

\goodbreak
{\bf Canonical Dynamical System}

\smallskip
 
$\widehat {\mathfrak k} (S^1)$ \hfill the space of Cauchy data for the Klein--Gordon equation
\quad \pageref{hatckpage} 

$(\mathbb{\phi}, \mathbb{\pi})$ \hfill  Cauchy data (an element of $\widehat {\mathfrak k} (S^1)$) \quad \pageref{hatckpage} 

$\widehat\sigma$   \hfill the canonical symplectic form on $\widehat {\mathfrak k} (S^1)$
\quad \pageref{hatckpage} 

$\widehat {\mathfrak u} (\Lambda)$ \hfill  representation of $O(1,2)$  on   
$(\widehat {\mathfrak k} (S^1), \widehat\sigma)$  \quad \pageref{widehatccTpage}

$\widehat {\mathbb P}$  \hfill  a map from ${\mathcal D}_{\mathbb{R}} (dS) $ to $ \widehat {\mathfrak k} (S^1)$  \quad \pageref{widehatcxC}

\bigskip

{\bf Complex Space-Time}

\smallskip

$dS_{\mathbb{C}}$ \hfill complex de Sitter space \quad \pageref{dSccpage}

${\mathcal T}_\pm $ \hfill forward (backward) tuboid \quad \pageref{tuboidspage}

$S^2$ \hfill Euclidean sphere  \quad \pageref{euclidspherepage}

\bigskip

{\bf Fourier Transformation}

\smallskip

$  ( x_\pm \cdot p \, )^s$ \hfill the Harish-Chandra plane-wave  \quad \pageref{planewavepage}

$ \widetilde {f}_\pm ( p, s)  $ \hfill Fourier transform \quad \pageref{Fouriertransformpage}

${\mathcal F}_{+ \upharpoonright \nu} $ \hfill FH-transformation restricted to the mass shell
\quad \pageref{mass-shell-ft-page}

$\widetilde f_\nu (p )$ \hfill  restriction of the Fourier transformation  to the mass shell \quad \pageref{mass-shell-ft-page}

\bigskip

\goodbreak
{\bf Covariant One-Particle Structure}

\smallskip
 
${\mathfrak h}  (dS)$ \hfill completion of ${\mathcal D}_{\mathbb{R}} (dS)/{\ker ( {\mathbb E}_\mu  {\mathcal F}_+ )} $ \quad \pageref{ophs1page}

$\langle  \, . \, , \, . \, \rangle_{{\mathfrak h} (dS)}$ \hfill scalar product on ${\mathfrak h}  (dS)$  \quad \pageref{ophs1page}

$ u (\Lambda)$ \hfill  unitary irreducible representation of $SO_0(1,2)$ on ${\mathfrak h}  (dS)$ \quad \pageref{umLambdapage2}

$K$ \hfill  maps ${\mathfrak k} (S^1) $ into ${\mathfrak h}  (dS)$ \quad \pageref{umLambdapage2}

$\left( K, {\mathfrak h}  (dS), u  \right)$ \hfill   one-particle
structure for $( {\mathfrak k} (dS) ,\sigma, {\mathfrak u} )$ \quad \pageref{covonepartpage}

\bigskip

{\bf Canonical One-Particle Structure}

\smallskip
$\widehat{{\mathfrak h}}  (S^1)$ \hfill   time-zero Hilbert space  \quad \pageref{chfdpage}

$\langle  \, . \, , \, . \, \rangle_{\widehat{{\mathfrak h}}  (S^1)}$ \hfill  scalar product on $\widehat{{\mathfrak h}}  (S^1)$  \quad \pageref{chfdpage}

$\widehat{K}$ \hfill  maps $\widehat {\mathfrak k} (S^1) $ into $\widehat{{\mathfrak h}}  (S^1)$ \quad \pageref{Kmhatpage}

$\bigl(\widehat{K} , \widehat{{\mathfrak h}}  (S^1), \widehat{u} \bigr) $ \hfill   one-particle
structure for $(\widehat {\mathfrak k} (S^1) ,\widehat \sigma, \widehat   {\mathfrak u} )$ \quad \pageref{Kmhatpage}

$\widehat{u}(\Lambda)$ \hfill  unitary irreducible representation of $SO_0(1,2)$ on $\widehat{{\mathfrak h}}  (S^1)$ \quad \pageref{Umhatpage}
 
\bigskip 
\goodbreak
{\bf Operator Algebras and States}

\smallskip

${\mathfrak W} ({\mathfrak k} , \sigma)$ \hfill  Weyl algebra \quad \pageref{weylalgebrapage}

$\alpha_\Lambda $ \hfill automorphic representation of  $SO_0(1,2)$ on  ${\mathfrak W}({\mathfrak k}(dS),\sigma)$ \quad \pageref{alphapage}

$\bigl({\mathfrak W}(dS), \alpha_\Lambda^\circ \bigr)$ \hfill covariant quantum dynamical system  \quad \pageref{cqds-page}
 
$\bigl(\widehat {\mathfrak W}(dS), \widehat \alpha_\Lambda^\circ \bigr)$ \hfill canonical quantum dynamical system  \quad \pageref{cqds-page}

$\alpha_\Lambda $ \hfill automorphic representation of  $SO_0(1,2)$ on  $\widehat {\mathfrak W}({\mathfrak k}(S^1),
\widehat \sigma)$ \quad \pageref{alphapage}

$\omega^\circ$ \hfill free de Sitter vacuum state  \quad  \pageref{freevacuumstatepage}

$\widehat \omega^\circ$ \hfill free de Sitter vacuum state  \quad  \pageref{freevacuumstatepage}

${\mathscr A}  ( {\mathcal O}) $ \hfill v.~N.~algebra for  the free fields in a double cone ${\mathcal O}\subset dS$ \quad \pageref{AO-page}

${\mathcal R} ( I) $ \hfill v.~N.~algebra for  the free fields in the interval $I \subset S^1$ \quad \pageref{RI-page}

\bigskip 
\goodbreak

{\bf Euclidean space-time}

\smallskip

$S^2$  \hfill Euclidean space-time \quad \pageref{eulidspherepage}

$S_\pm$  \hfill upper (resp.~lower) hemisphere \quad \pageref{spherepluspage}

$S^1$ \hfill time-zero circle \quad \pageref{equatoreuclidpage}

$I_\pm$ \hfill half-circle formed by $ W_1 \cap S^1 $ or  $ W_1' \cap S^1 $ 
\quad \pageref{eulidspherepage}

$I_\alpha$ \hfill  the half-circle $I_\alpha = {\tt R}_{0} (\alpha) I_+$ 
\quad \pageref{halfcirclealphapage}

\bigskip

\goodbreak
{\bf Probability space}

\smallskip

${\mathcal Q}={\mathcal D}'_{{\mathbb R}}(S^2)$ \hfill distributions  \quad  \pageref{dualitypage}

$\Sigma$ \hfill  Borel $\sigma$-algebra on ${\mathcal Q}$ \quad \pageref{dualitypage}

$C (f,g)$ \hfill covariance \quad \pageref{dualitypage} 
 
 ${\rm d} \Phi_{C}$ \hfill Gaussian measure \quad \pageref{dualitypage}
 
 $L^{p}({\mathcal Q}, \Sigma, {\rm d}\Phi_{C})$ \hfill $L^p$ spaces \quad \pageref{dualitypage}
 
\bigskip

{\bf Time-zero fields}

\smallskip

$C_{| s |} (h_{1}, h_{2})$ \hfill  time-zero covariance \quad  \pageref{stcpage} 

$\Phi (\theta, h)$ \hfill  sharp-time field \quad  \pageref{stfpage}

\bigskip

\goodbreak
{\bf Sobolev spaces}

\smallskip

$\mathbb{H}^{\pm 1} (S^2)$ \hfill Sobolev spaces   \quad \pageref{sobolevpage}

$\widehat{\mathfrak h} (S^1)$ \hfill a subspace of $\mathbb{H}^{- 1} (S^2)$ \quad \pageref{sobolevpage}

$e ({S^1}), e \left(\overline{S_\pm}\right)$ \hfill orthogonal projections \quad \pageref{sobolevpage}

\bigskip

{\bf Fock space}

\smallskip

$\Gamma( \mathbb{H}_{{\mathbb C}}^{-1}(S^2))$ \hfill Fock space over the Sobolev space $\mathbb{H}_{{\mathbb C}}^{-1}(S^2)$  \quad \pageref{fockpage}

${\mathcal E}_{\Sigma_{S^1}} := \Gamma \left(e({S^1}) \right)$ 
\hfill conditional expectation on $L^{2}({\mathcal Q}, \Sigma, {\rm d}\Phi_{C})$   \quad \pageref{fockpage}

$\Gamma(\widehat{\mathfrak h} (S^1))$ \hfill	the Fock space ${\mathcal H} \cong \Gamma(\widehat{\mathfrak h} (S^1))$ 
\quad \pageref{hospage}

\bigskip

\goodbreak
{\bf Interaction}

\smallskip

${:} \Phi(f)^{n} {:}_c$ \hfill   Wick ordering  \quad \pageref{wickpage}

$ V \left(\overline{S_+} \right)$ \hfill interaction on the upper hemisphere 
\quad \pageref{interactionspherepage}

${\rm d}\mu_{V}$  \hfill perturbed measure  on the sphere \quad \pageref{interactionspherepage}

${V_0} (\cos_\psi) $ \hfill the interaction on the half-circle $I_+$\quad \pageref{vcospage}

\bigskip

{\bf Fock representations}

\smallskip
$\pi_E $ \hfill Fock representation on $\Gamma(\mathbb{H}_{{\mathbb C}}^{-1}(S^2))\cong L^2(Q, \Sigma, {\rm d} \Phi_C)$  \quad \pageref{fockref-2}

\bigskip

{\bf Canonical  von Neumann algebras}

\smallskip

${\mathcal U} (S^1)$ \hfill abelian algebra of  time-zero fields in ${\mathcal H}$ 
\quad \pageref{abelianalgebraospage}

\bigskip

{\bf Osterwalder-Schrader Hilbert spaces}

\bigskip

$ \Theta := \Gamma ( T_*) $ \hfill time reflection on $L^{2}({\mathcal Q}, \Sigma, {\rm d}\Phi_{C})$   \quad 
\pageref{hospage}

$ {\mathcal H}$ \hfill  Osterwalder-Schrader Hilbert space \quad \pageref{hos-page}

${\mathcal V}$ \hfill canonical map from $ L^{2}({\mathcal Q}, \Sigma_{\overline { S_+}}, {\rm d} \Phi_C) $ 
to ${\mathcal H}$ \quad \pageref{hos-page}

$\Omega$  \hfill free vacuum vector in ${\mathcal H}$ \quad \pageref{vacumvecpage}

$A^{\rm os}$ \hfill  multiplication operator on  $L^2 ({\mathcal Q}, \Sigma_{\overline { S_+}}, {\rm d} \Phi_C)$ \quad \pageref{abelianalgebraospage}

$K_0^{\rm os}$ \hfill generator of the rotation on ${\mathcal H}$ \quad \pageref{knullosproppage}	

$\Omega_{\rm int}$  \hfill interacting vacuum vector in ${\mathcal H}$ \quad \pageref{intvacuumpage}

\bigskip

{\bf Generalized path spaces}

\smallskip

$({\mathcal Q}, \Sigma, \Sigma_{0}, {\rm U}(t), \Theta, \mu)$   \hfill generalised path space  \quad  \pageref{gppage}

${\rm U} ^{(\alpha)} (\theta) $ \hfill  measure preserving automorphisms \quad  \pageref{gppageds}

$\Sigma=\bigvee_{\theta \in
S^{1}}{\rm U}^{(\alpha)} (\theta)\Sigma^{(\alpha)}$ \hfill the Borel $\sigma$-algebra on ${\mathcal Q}$
  \quad  \pageref{gppageds}

$\Sigma^{(\alpha)}$ \hfill
smallest $\sigma$-algebra for which 
$\Phi(0,h)$  
is measureable
  \quad  \pageref{sigmaalphapage}

\bigskip

{\bf Local symmetric semigroups}

\smallskip

$\bigl(P^{(\alpha)}(\theta), {\mathscr D}^{(\alpha)}_{\theta} \bigr)$ \hfill local symmetric semigroup for the free dynamics\quad  \pageref{lssgpage}

$L^{(\alpha)}$ \hfill generator of the free boost $\Lambda^{(\alpha)}$ \quad  \pageref{lssgpage}

$ J^{(\alpha)}$ \hfill  modular conjugation associate to 
$({\mathcal R} (I_\alpha), \Omega)$ \quad \pageref{Jalpha}

$( P^{(\alpha)}_V (\gamma), {\mathscr D}_{\theta, V} \bigr)$ \quad  \hfill loc.~sym.~semigroup for interacting dynamics \quad \pageref{intlssgpage}

$L_V^{(\alpha)}$ \hfill generator of the interacting boost $\Lambda^{(\alpha)}$  \quad  \pageref{intlssgpage}

$ J_V^{(\alpha)}$ \hfill  modular conjugation associate to 
$({\mathcal R} (I_\alpha), \Omega_{\rm int})$ \quad \pageref{TTintpage}

\bigskip

\goodbreak
{\bf Virtual representations}

\smallskip

$(G, H , \vartheta)$ \hfill symmetric space  \quad  \pageref{virtreppage}

${\mathfrak g}, {\mathfrak k} , {\mathfrak m}$ \hfill Lie algebras  \quad  \pageref{virtreppage}

${\mathfrak g}^*$ \hfill dual symmetric Lie algebra  \quad  \pageref{virtreppage}

$ \wp$ \hfill virtual representation  \quad  \pageref{virtreppage2}

\bigskip

\goodbreak
{\bf Auxiliary Hilbert spaces for the interacting measure}

\smallskip

$L^2( {\mathcal Q}, \Sigma, {\rm d}\mu_{V} ) $ \hfill $L^p$ spaces for the interacting measure \quad \pageref{interactingsphereLP}

$ {\mathcal H}_V$ \hfill completion of $L^{2}({\mathcal Q}, \Sigma_{\overline { S_+}}, {\rm d} \mu_V) /{\mathcal N}_V$ \quad 
\pageref{inthilbertpage}

${\mathcal V}_V $ \hfill canonical map from $ L^{2}({\mathcal Q}, \Sigma_{\overline { S_+}}, {\rm d} \mu_V)$ to ${\mathcal H}_V$ \quad \pageref{inthilbertpage}

$ \Omega_V$ \hfill interacting vacuum vector in ${\mathcal H}_{V}$ \quad \pageref{inthilbertpage}

$A_V$  \hfill  multiplication operator on  $L^2 ({\mathcal Q}, \Sigma_{\overline { S_+}}, {\rm d} \mu_V)$ 
\quad \pageref{inthilbertpage}

$ {\mathcal U}_V (S^1)$ \hfill abelian algebra of  interacting time-zero fields in ${\mathcal H}_{V}$ 
\quad \pageref{inthilbertpage}

${\mathbb V}$ \hfill  a unitary operator from ${\mathcal H}_V$ to ${\mathcal H}$ \quad \pageref{qtotot-page}

\bigskip

{\bf Unitary groups}

\smallskip

$\widehat{u}$  \hfill unitary irreducible representation of $SO_0(1,2)$ on~$\widehat{\mathfrak h} (S^1)$ \quad \pageref{Umhatpage}	

$\widehat{U}$  \hfill a unitary representation of $SO_0(1,2)$ on~${\mathcal H}$ \quad \pageref{umospage}	

$\widehat {\rm U}_{\rm int}$ \hfill interacting representations of $SO_0(1,2)$ on ${\mathcal H}$ \quad \pageref{unitrospage}

\bigskip
\goodbreak

{\bf Symbols Appendices}

\smallskip

$(K, \sigma, T_t)$ \hfill classical dynamical system \quad  \pageref{page48}

$\bigl({\mathfrak h}_{\scriptscriptstyle \rm AW},K_{\scriptscriptstyle \rm AW},U_{\scriptscriptstyle \rm AW} (t) \bigr)$ \hfill Araki-Woods one-particle structure
\quad  \pageref{page49}

\chapter*{Preface}

There is well-founded trust  in quantum field theory on curved space-times  and its ability to predict 
and explain many of the exciting astrophysical and cosmological phenomena currently discovered  
in one of the most thriving branches of experimental physics. But despite substantial effort, little is 
known about the physics of quantum fields on \emph{general} curved space-times beyond the scope of 
(re\-normalised) perturbation theory\footnote{Even in flat space, the ${\mathscr P}(\varphi)_2$-models, 
do \emph{not} allow {\em Borel summation} of the perturbation series, unless 
the order of the polynomial is less or equal four, as the number of Feynman diagrams grows to rapidly 
for polynomials of higher order. Although in each order of perturbation theory there are no divergences, 
the Green's functions are not analytic in the coupling constant, neither are the proper self energy and 
the two-particle scattering amplitude \cite{AJ1}.
For the $\varphi^4_2$-model on Minkowski space, perturbation theory yields a  Borel summable 
asymptotic series for the Schwinger functions.
}  (see,  \emph{e.g.},~\cite{BFr, BrFr, BFK, H1, H2, HW1, HW2, HW3}  
and references therein). 

However, for  {\em static} space-times\index{static space-time}, \emph{i.e.}, 
solutions to Einstein's equations for which the metric has the form
	\[
		g = \lambda \; {\rm d} t \otimes {\rm d} t 
		- \sum_{i, j=1}^3 \lambda^{-1} h_{ij} \; {\rm d} x_i \otimes {\rm d} x_j   \; , 
	\]
with both $\lambda$ and $h_{ij}$ time-independent\footnote{Strictly speaking, de Sitter space~$dS$ is 
not a static space-time, unless it is restricted to the past $\Gamma^-(W)$ of a wedge $W$ 
(which itself is conformally equivalent to Minkowski space).}, some progress has been made in recent years.
These space-times allow analytic continuations to Riemannian manifolds, and
Ritter and Jaffe \cite{JR1, JR2, JR3} pioneered a non-perturba\-tive, constructive approach to interacting fields
defined on them. 
They have shown that one can reconstruct a unitary representation of the isometry group 
of the static space-time under consideration, starting from the corresponding Euclidean field theory~\cite{JR2}.
Some progress has also been made in case the space-time is
asymptotically flat,  see, \emph{e.g.},~\cite{D84, GeP, GM}.

For \emph{maximally symmetric} space-times, like the (two-dimensional) de Sitter space,
the situation is even more favourable. In fact, in 1975 Figari, H\o egh-Krohn and Nappi~\cite{FHN} 
constructed the first (and up till now the only) 
interacting quantum field theory on a curved space-time, the so-called ${\mathscr P}(\varphi)_2$~model on the de Sitter 
space. In this work we reconsider the contribution of Figari, H\o egh-Krohn and Nappi~\cite{FHN} in the light of more recent work 
by Birke and Fr\"ohlich~\cite{BF}, Dimock~\cite{D} and Fr\"ohlich, Osterwalder and Seiler~\cite{FOS}.
We provide a detailed and very explicit, non-perturbative description of the ${\mathscr P}(\varphi)_2$ 
model on de Sitter space. 
Euclidean methods play an essential role in our approach, despite the fact that they are not available on general 
curved space-times. We, however, would like to mention that in forthcoming complementary work \cite{BMJ}, 
they play a less significant role.

Let us add some comments on the relevance of this particular model. 
In our opinion, the role of ${\mathscr P}(\varphi)_2$ model in quantum field theory 
may well be compared with the role the Ising model plays
in (quantum) statistical mechanics or the role  $SL(2, \mathbb{R})$ plays in harmonic analysis. 
The various ${\mathscr P}(\varphi)_2$ models were the 
first interacting quantum field theories (in Minkowski space), which gained a precise mathematical meaning
and up till now they remain the most thoroughly studied models in the axiomatic 
framework. The original construction of these models (without cutoffs) is due to  
Glimm and Jaffe~\cite{GJ1}\cite{GJ2}\cite{GJ3}\cite{GJ4}\cite{GJ5}\cite{GJ6}. 
An enormous amount of work has been invested to understand 
the scattering theory, the bound states, the low energy particle structure and the properties of the 
correlation functions of these models (see the books by Glimm and Jaffe~\cite{GJ}\cite{GJcp}  
and Simon \cite{S}, and the references therein).  The ${\mathscr P}(\varphi)_2$ models are also 
the only interacting quantum field theories, for which the non-relativistic limit (including bound states) 
has been analysed in detail, demonstrating that the low energy regime of these models can be equally 
well  described by non-relativistic bosons interacting with $\delta$-potentials \cite{D57}\cite{SZ}. 
In addition, Hepp  demonstrated how one can recover the classical field equations for the 
${\mathscr P}(\varphi)_2$ models by taking the classical limit \cite{Hepp}.

\bigskip

Finally, we would like to express our deep gratitude to the mathematics and physics community. 
This work would not be possible without the continuing efforts of colleagues working in differential 
geometry~\cite{Frankel, KN}, harmonic analysis~\cite{DyMcK, Folland, He, Mo1, Vara}, complex analysis in several 
variables~\cite{Fa, Hoer1, Vlad}, operator algebras \cite{BR, KR,Takesaki}, 
the representation theory of semi-simple Lie groups~\cite{Ba, BaF, Knapp, Mack, T, Vil}, 
the theory special functions \cite{Lebedev, snow}, 
axiomatic quantum field theory~\cite{SW, Jost}, local quantum field theory~\cite{A0, H}, constructive quantum 
field theory~\cite{GJ, GJcp, S} and quantum statistical mechanics \cite{BR, Ruelle}. 
Of course, the references given here can only serve as an entry point to the literature
as the scope of the material, on which this work is firmly based on, is exceptionally wide.

\vskip .5cm

\aufm{Jo\~{a}o C.A.~Barata  ,
Christian D.\ J\"akel \&
Jens Mund}

\mainmatter

\part{De Sitter Space}

\chapter{De Sitter Space as a Lorentzian Manifold}
\label{geometry}

We start with a few historical remarks. In 1915 Albert Einstein published his theory of gravitation. The Einstein equations\index{Einstein equations},
	\[
		\underbrace{R_{\mu \nu} - \frac{1}{2} R \, g_{\mu \nu} }_{\doteq G_{\mu \nu}}
		= 8 \pi \, T_{\mu \nu} \; ,  \qquad \mu, \nu = 0, 1, \ldots, 3,
	\]
describe the curvature\index{curvature} of space-time resulting from the distribution of matter fields in space-time.
They relate the metric tensor $g_{\mu \nu}$ and the stress-energy tensor $T_{\mu \nu}$. 
The Ricci tensor\index{Ricci tensor} $R_{\mu \nu}$ and the scalar curvature\index{scalar curvature} 
$R$ both depend only on the metric tensor $g_{\mu \nu}$. 
Given a particular model, one can obtain the stress-energy tensor\index{stress-energy tensor}
	\[
		T_{\mu \nu} = \frac{2}{\sqrt{|g|}} \frac{ \delta S_{m} }{\delta g_{\mu \nu}} 
	\]
from Hilbert's classical prescription of varying the
action $S_{m}$ representing the matter fields with respect to the metric tensor\footnote{The Einstein 
equations themselves may be obtained by demanding that $S_g + \kappa S_m$ is stationary with 
respect to variations of $g_{\mu \nu}$, where $S_g = \frac{1}{2} \int \sqrt {|g|} \, R $.}\index{metric tensor}.   

In 1916 Einstein (re-) introduced a positive (\emph{i.e.}, repulsive) cosmological constant $\Lambda>0$ in 
the Einstein equations, requesting\footnote{In space-time dimension two, the Einstein tensor $G_{\mu \nu}$  is always zero.
Nevertheless, $R$ may be non-zero. 
Note that there is no Bianchi identity in two dimensions.} 
	\[
		R_{\mu \nu} - \frac{1}{2} R \, g_{\mu \nu} + \Lambda \, g_{\mu \nu} 
		= 8 \pi \, T_{\mu \nu} \; , 
	\]
in an attempt to ensure the existence of static solutions. Only a few months later, in 1917, de Sitter showed 
that for $T_{\mu \nu}=0$ (\emph{i.e.}, the empty space), the new constant $\Lambda >0$ leads to a 
universe, which undergoes accelerated expansion \cite{deS1,deS2}. Einstein first discarded de 
Sitter's solution as physically irrelevant because it is not globally static \cite{Schu}. 
However, experimental evidence soon suggested that on a large scale our universe is isotropic, 
homogeneous and indeed undergoing accelerated expansion. The latter may be attributed to the 
existence of a positive cosmological constant \cite{Riess}. 

\section{The metric and the isometry group}

De Sitter space\index{de Sitter space} $( dS , g )$ is the \emph{Lorentzian manifold}\index{Lorentzian manifold} 
analog of the Euclidean sphere. It is maximally symmetric and has constant negative curvature. 
In more than two space-time dimensions, it is simply-con\-nected. 
In two dimensions, $dS$
can be viewed as a one-sheeted \emph{hyperboloid}\index{hyperboloid}, embedded in $1+2$-dimensional Minkowski 
space~$\mathbb{R}^{1+2}$. 

\subsection{Embedding de Sitter space into $\mathbb{R}^{1+2}$}
Following \cite{Schr}, we identify de Sitter space with the submanifold
\color{black}
\label{dSpage}
	\begin{equation}
		\label{eqdSMin}
		dS \doteq \left\{  
		x \equiv (x_0, x_1, x_2) \in \mathbb{R}^{1+2} 
		\mid 
		x_{0}^{2} - x_{1}^{2} - x_{2}^{2} = - r^2 \right\} \; , 
		\quad r >0 \; .
	\end{equation}
Unless the radius $r$ of the time-zero circle plays a significant
role, we will suppress the dependence on $r$. We denote the points of the  $1+2$-dimensional Minkowski
space~$\mathbb{R}^{1+2}$ as either triples $(x_0, x_1, x_2)$ or column vectors
$\left(\begin{smallmatrix} x_0 \\ x_1 \\ x_2\end{smallmatrix}
\right)$, which ever is more convenient. The point $o \equiv (0, 0, r )\in dS$ is called the 
\emph{origin}\index{origin} of~$dS$.  

We denote by $S^1 \subset dS$ the ``time zero'' circle 
\label{ccCpage}
	\begin{equation} 
		\label{SeinsC}
		S^1 \doteq  \bigl\{ ( 0, r \sin \psi, r \cos \psi) \in 
		\mathbb{R}^{1+2} \mid - \tfrac{\pi}{2} \le \psi < \tfrac{3\pi}{2}  \bigr\}   
	\end{equation} 
and by $I_+$ (respectively, by $I_-$) the open subset of $S^1$ with
positive (respectively, negative) $x_1$ coordinate: 
\label{halfcirclepage}
	\begin{equation} 
		\label{halfC}
		I_+\doteq \big\{(0, r \sin \psi, r \cos \psi ) \in 
		\mathbb{R}^{1+2} \mid - \tfrac{\pi}{2} \le \psi < \tfrac{\pi}{2} \big\}
	\end{equation} 
and $I_-\doteq \big\{(0, r \sin \psi, r \cos \psi ) \in 
		\mathbb{R}^{1+2} \mid  \tfrac{\pi}{2} \le \psi < \tfrac{3\pi}{2}\big\}$.

\subsection{The metric}
The \emph{metric}\index{metric} on $dS$ equals the induced metric 
$g= \mathbb{g}_{\upharpoonright dS } $, with   
\label{metricpage}
	\begin{equation}
		\label{metrik}
		\mathbb{g} = {\rm d} x_0 \otimes {\rm d} x_0 
		- {\rm d} x_1 \otimes {\rm d} x_1 - {\rm d} x_2 \otimes {\rm d} x_2   
 	\end{equation}
the metric of the ambient space $(\mathbb{R}^{1+2},  \mathbb{g})$. We denote the Minkowski  
product of two vectors $x, y$ in $\mathbb{R}^{1+2}$ by $x \cdot y \in \mathbb{R}$. 

\subsection{The Isometry Group}

The isometry group of $dS$ is $O(1,2)$. Its linear action on the ambient
space $\mathbb{R}^{1+2}$ is given by $3\times 3$-matrices acting on vectors
$\left(\begin{smallmatrix}x_0\\ x_1 \\ x_2\end{smallmatrix}\right)\in\mathbb{R}^{1+2}$.  The group 
	\[ 
		O(1,2) = O^\uparrow_+ (1,2)  \cup O^\downarrow_+ (1,2)  \cup O^\uparrow_- (1,2) 
				\cup O^\downarrow_- (1,2) 
	\]
has  four connected components~\cite{SW}, namely those (distinguished by $\pm$),
which preserve or change the 
orientation and those (distinguished by $\uparrow \downarrow$), which preserve or change the 
time orientation. Group elements, which preserve the orientation, are called {\em proper}. 
Lorentz transformations, which preserve the time orientation, are called 
{\em orthochronous}. 
The connected component containing the identity is the \emph{proper, orthochronous 
Lorentz group}\index{Lorentz group}, denoted as $SO_0 (1,2) \equiv  O^\uparrow_+ (1,2)$. 
The group $SO_0(1,2)$ acts transitively on the de
Sitter space $dS$. $SO_0(1,2)$ has three uniparametric subgroups 
leaving the coordinate axes in $\mathbb{R}^{1+2}$ invariant: the \emph{rotation} subgroup
$\{R_0(\alpha) \mid \alpha\in [0, 2 \pi)\}$, with
\label{rropage}
	\[
		R_0(\alpha) \; \doteq \; \begin{pmatrix}
					1 &  0 &0 \\
					0 &  \cos \alpha &  - \sin \alpha  \\
					0  & \sin \alpha & \cos \alpha   
				\end{pmatrix} \;,
	\]
and the two subgroups of
\emph{Lorentz boots} $\{\Lambda_1(t) \mid t\in\mathbb{R} \}$ and
$ \{\Lambda_2(s) \mid s\in\mathbb{R} \}$, with
\label{lambdapage}
	\[
		\Lambda_1 (t)   \; \doteq \;   \begin{pmatrix}
				 \cosh t  &  0 &\sinh t \\
     					  0  &  1 & 0  \\
 				  \sinh t &  0 & \cosh t   
				\end{pmatrix} 
	\; \mbox{ and } \;
		\Lambda_2(s) \doteq \begin{pmatrix}
				\cosh s &  \sinh s &0 \\
				\sinh s &  \cosh s &0  \\
					0 & 0 & 1  
				\end{pmatrix} \;.
	\]
According to our convention, the boosts $\Lambda_1 (s)$ (respectively,  $\Lambda_2(t)$) keep 
the $x_1$-axis (respectively,  the $x_2$-axis) invariant, and therefore correspond  to  boosts
in the $x_2$-direction  (respectively, in the $x_1$-direction).

\subsection{Generators}

The \emph{generators}\index{generator} of  
the \emph{boosts}\index{boost} $\mathbb{R}\ni t \mapsto \Lambda_{1}(t)$, $\mathbb{R}\ni s \mapsto \Lambda_{2}(s)$
and the rotations $ [0, 2 \pi )\ni \alpha \mapsto R_{0}(\alpha)$ are 
	\[
		\qquad L_1  =  
			\begin{pmatrix}
							0 &  0 & 1 \\
							0 &  0 & 0  \\
							1 & 0 & 0   
				\end{pmatrix}  , 
		\; L_2  =  
			\begin{pmatrix}
							0 &  1 & 0 \\
							1 &  0 & 0  \\
							0 & 0 & 0   
			\end{pmatrix} 
		\;
				\text{and} \; \; 
			K_0  =  
			\begin{pmatrix}
							0 &  0 & 0 \\
							0 &  0 & -1  \\
							0  & 1 & 0   
			\end{pmatrix} ,  
		\]
respectively. We will also refer to the generators
	\label{Lambdaalphapage}
		\begin{equation} 
		\label{Lalpha}
			L^{(\alpha)} \doteq \cos \alpha \; L_1 + \sin \alpha  \; L_2 \; , 
			\qquad \alpha \in [ 0, 2 \pi ) \; , 
		\end{equation} 
of the boosts  
		\begin{equation*} 
			\label{Lambdaalpha}
				\Lambda^{(\alpha)} (t)= 
				R_{0}(\alpha) \Lambda_{1}(t) R_{0}(-\alpha) \; , 
				\qquad t \in \mathbb{R} \; . 
		\end{equation*} 
Note that $L^{(0)}= L_1$ and $L^{(\pi/2)} = L_2$. 
The \emph{Casimir operator} 
	\[
		C^2 = - K_0^2 + L_1^2 + L_2^2 = 2 \cdot \mathbb{1}_3 \; , 
	\]
with~$\mathbb{1}_3$ the unit $3 \times 3$-matrix, is an element in the centre of the universal enveloping 
algebra\footnote{An analog results holds in $SO(3)$, where the Casimir operator equals 
$2 \cdot \mathbb{1}_3$ as well.} of the Lie algebra $\mathfrak{so}(1,2)$. 

\bigskip
We will continue our discussion of the Lorentz group $SO_0(1,2)$ in Chapter \ref{isometrygroup}.

\section{The causal structure}

\subsection{Time-like and space-like curves}
The intrinsic \emph{causal structure}\index{causal structure} of $dS$ coincides with the 
one inherited from the ambient Minkowski space.  This allows us to freely use the standard terminology.
In particular, we call a smooth curve $t \mapsto \gamma (t)$ on $dS$  (with nowhere 
vanishing tangent vector $\dot \gamma$) {\em causal, time-like, light-like}\index{time-like} 
and {\em space-like}\index{space-like}, according 
to whether the tangent vector  satisfies
	\[ 
		0 \le \dot \gamma \cdot \dot \gamma  \; , 
		\quad 0 < \dot \gamma \cdot \dot \gamma  \;, 
		\quad 0 = \dot \gamma \cdot \dot \gamma  \;, 
		\quad \dot \gamma \cdot \dot \gamma < 0 \;, 
	\]
everywhere along the curve. Similarly, a point $y \in dS$ is called {\em causal, time-like, light-like} 
and {\em space-like separated} to $x\in dS$, if $(y-x) \cdot (y-x) $ is larger or equal than, larger than, 
equal to  or  smaller than zero, respectively. Since $x \cdot x = y \cdot y = -r^2$, these notions are equivalent to
	\[
		x\cdot y \le - r^2, \quad x\cdot y < -r^2, \quad x\cdot y = -r^2, \quad -r^2 < x\cdot y \; , 
	\]
respectively. 

\subsection{The future and the past}
The \emph{future}\index{future} $\Gamma^+ ( x )$and the \emph{past}\index{past} $\Gamma^- ( x )$ of a  point $ x \in dS$ 
are\footnote{In particular, $\Gamma^\pm (0,0,r) = \{ y \in dS \mid \pm y_0 > 0, y_2 \ge r \}$.} 
given by
	\begin{equation} 
		\label{lklkl}
			\Gamma^\pm ( x ) \doteq \bigl\{  y \in dS \mid \pm ( y-  x ) 
			\in \overline{V^+} \bigr\} \; , 
	\end{equation} 
where the bar in \eqref{lklkl} denotes the closure of the \emph{future cone}\index{future cone}\label{vpluspage}
	\[
		V^+ \doteq \bigl\{   y \in \mathbb{R}^{1+2} \mid  y \cdot  y >0,  y_0 > 0 \bigr\}  
	\]
in the ambient space $(\mathbb{R}^{1+2}, g)$.  The boundaries of the future (and the 
past) are given by two \emph{light rays}\index{light ray}, which form the intersection of $ dS $ with a Minkowski  forward (respectively, 
backward) \emph{light cone}\index{light cone}
	\begin{equation*}  
		\label{cone}
				C^\pm ( x ) 
				= \bigl\{  y \in  dS \mid ( y -  x ) \cdot  ( y -  x ) = 0, \;
				\pm ( y_0 - x_0) > 0 \bigr\}  
	\end{equation*}
with apex at $x$. As it turns out, these two light rays  are also given\footnote{For the origin $o$, the light 
rays $\{ o + \lambda (\pm 1, 0 , 1) \mid \lambda \in \mathbb{R} \}$ 
are given by the intersection of $dS$ with the plane $\{ x \in \mathbb{R}^{1+2} \mid x_2= r \}$.}
by the intersection of $dS$ with the tangent plane at $x \in dS$. They  
separate the future, the past and the space-like regions
relative to the point $x$. 
The forward light cone $C^+ ( (0,0,0) ) $ with apex at the origin coincides with the boundary set $\partial V^+$
of the forward cone. 

\goodbreak

\subsection{Cauchy surfaces}
De Sitter space is \emph{globally hyperbolic}\index{globally hyperbolic}, \emph{i.e.}, it has no time-like closed curves 
and for every pair of points $ x,  y \in dS$ the set
	\begin{equation*} 
		\label{globalhyper}
 			\Gamma^- ( x ) \cap  \Gamma^+ ( y)
	\end{equation*} 
is compact (eventually empty). 
These two properties imply 
that $dS$ is diffeomorphic to $\mathcal{C} \times  \mathbb{R} $, 
with $\mathcal{C}$  a \emph{Cauchy surface}\index{Cauchy surface} for $dS$ (see, \emph{e.g.},~\cite{BS}). It is convenient 
to choose $\mathcal{C} = S^1$;  see \eqref{SeinsC}.
One may arrive at this choice by first choosing an arbitrary point $ x \in dS$ and a space-like 
\emph{geodesic}\index{geodesic}\footnote{In the presence of a metric, a geodesic can be defined as the curve joining 
$  x$ and $ y $ with maximum possible length in time --- for a time-like curve --- or the minimum 
possible length in space --- for a space-like curve. 
The null-geodesics on the de Sitter space are 
light rays, \emph{i.e.}, straight lines.}  
$\mathcal{C}$ passing through $ x$, and then introducing coordinates in \eqref{eqdSMin} such 
that~$\mathcal{C}$ equals~\eqref{SeinsC}. 

\subsection{Space-like complements and causal completions}
The complement of the union $\Gamma^+ ( x) \cup \Gamma^- ( x )$  
consists of space-like points.
The \emph{space-like complement}\index{space-like complement} of a simply connected set $\mathcal{O} \subset dS$ is the set 
\label{cOpage}
	\begin{equation*} 
	\label{2.12}
		 \mathcal{O}' \doteq \left\{  y \in dS 
		 \mid  y \notin  \Gamma^+ ( x)  \cup  \Gamma^- ( x)  \;
		\; \forall   x \in \overline{\mathcal{O} }\right\} \; . 
	\end{equation*} 
The \emph{causal completion}\index{causal completion} $\mathcal{O}''$ of $\mathcal{O}$ is defined as the space-like 
complement of~$\mathcal{O}'$. A subset  $\mathcal{O} \subset dS$ is called \emph{causally complete}\index{causally complete}, 
if $\mathcal{O}'' = \mathcal{O} $. (Note that one always has $\mathcal{O} \subset \mathcal{O} ''$.) 

\begin{remark}These notions apply as well to subsets of lower dimension, 
\emph{e.g.},  line-segments  in $dS$. For example, one can easily compute the causal 
completion of an open interval $I\subset S^1$: set
	\[
		x(\psi_\mp) = ( 0, r \sin \psi_\mp, r \cos \psi_\mp) \; , \qquad \text{with} \quad 0\le \psi_- < \psi_+ <\pi \; . 
	\]
The two intersecting (half-) light rays passing through $(0, r \sin \psi_\pm, r \cos \psi_\pm)$ are 
	\begin{equation}
	\label{intersectinglightrays}
				R_0 ( \psi_\mp )
									\left[ 
				\begin{pmatrix}
					0\\
					0\\
					r
				\end{pmatrix}
	+ \lambda 	\begin{pmatrix} 
					 1 \\
					\mp 1\\
					0  
				\end{pmatrix} \right] = 
								\begin{pmatrix}
					\lambda \\
					- r
					\sin\psi_\mp \mp \lambda \cos \psi_\mp \\
					r
					\cos \psi_\mp \mp \lambda \sin \psi_\mp
				\end{pmatrix} \; , 
	\end{equation}
with $\lambda  >0$. They intersect at 
	\begin{equation}
		\label{height}
		\lambda =  	r \tan \left( \tfrac{\psi_+ - \psi_-}{2}\right)  . 
	\end{equation}
Now, any space-like geodesic can be identified with $S^1$ by applying a coordinate transformation.
Therefore the causal completion of an open interval $I$ on an \emph{arbitrary} 
space-like geodesic  is a bounded
space-time region in $dS$, if the length $| I |$ (measured by inserting the endpoints of the interval 
$I$ in~\eqref{dlength}, see below) is less than $\pi \, r $. 
\end{remark}

\section{Geodesics and geodesic distances}
\label{geodesics}

De Sitter space is \emph{geodesically complete}\index{geodesically complete}, \emph{i.e.}, the affine parameter of any 
geodesic passing through an arbitrary point $x \in dS$ 
can be extended to reach arbitrary values. However, given \emph{two} points $  x, y \in dS$, one may ask whether 
there exist geodesics passing through \emph{both} $x$ and $y$: 
\begin{itemize}
\item[$ i.)$] if $y$ is  time- or light-like to the antipode~$-x$ of $x$, then there is \emph{no} geodesic 
passing through both points $x$ and $y$; 
\item[$ ii.)$] the  case $ y = -x$ is degenerated, as \emph{every} space-like geodesics passing 
through $x$ also passes through $- x$;
\item[$ iii.)$] in all the other cases, there exists\footnote{For a proof, we refer the reader to 
\cite[Bemerkung (4.3.14)]{Thirring}.} a \emph{unique} geodesic passing 
through $ x$ and~$ y $. It is a connected component of the intersection 
of~$dS$ with the plane in $\mathbb{R}^{1+2}$ passing through $ x, y $ 
and~$0$~\cite{oneil}. 
\end{itemize}

\begin{remark} If a \emph{time-like} curve is contained in the intersection of $dS$ with a plane 
\emph{not} passing through the origin, then it describes the trajectory of a \emph{uniformly accelerated} observer.
\end{remark}

\subsection{Geodesic distance}
If two points are connected by a geodesic, a \emph{geodesic distance}\index{geodesic distance} can be defined:
\begin{itemize}
\item[$i.)$] if $x$ and $y$ are space-like to each other \emph{and}\footnote{If $y$ is  time- or light-like to 
the \emph{antipode}~$-x$ (\emph{i.e.}, $x \cdot  y >   r^2 $), then $d(x,y)$ is not 
defined.} $ |  x \cdot  y  |  \le  r^2 $,  
a \emph{spatial distance}\index{spatial distance}  
	\begin{equation} 
		\label{dlength}
			d( x ,  y ) \doteq 
			r \arccos \left(-   \tfrac{x \cdot  y}{r^2} \right) 
	\end{equation} 
is defined as the length of the arc on the ellipse 
connecting $x$ and~$y$.
Note\footnote{Recall that the principle values of the function $[-1,1] \ni z \to \arccos (z)$  
are monotonically decreasing between $\arccos (-1) = \pi$ and $\arccos (1) = 0$.} 
 that $d( x ,  x )=0$, iff $x \cdot  x = - r^2 $;

\item[$ii.)$] if $x$ and $y$ are time-like to each other,
a \emph{proper time-difference}\index{proper time-difference} 
	\begin{equation} 
		\label{dlength2}
			d( x ,  y ) \doteq 
			r \; {\rm arcosh} \left(-   \tfrac{x \cdot  y}{r^2} \right) 
			= r \ln \left(-   \tfrac{x \cdot  y}{r^2} + \sqrt{ \tfrac{|x \cdot  y|^2}{r^4} -1} \, \right)  \;  .
	\end{equation} 
is defined as the length of the arc on the hyperbola connecting $x$ and~$y$.
\end{itemize}

\section{Wedges and double cones}

If $x \in dS$, the point $-x$, called the \emph{antipode}\index{antipode}, is in $dS$ as well. 
The light rays going through $x$ and $-x$ lie in the tangent planes at $x$ and $-x$, respectively. 
These tangent planes are parallel to each other. It follows that  a point $x \in dS$ determines 
four closed regions, namely $\Gamma^\pm ( \pm x)$. Since  
	$
	\Gamma^{+}( \pm x) \cap \Gamma^{-} (\pm x) = \{ \pm x \} $,
their union consists of two disjoint, connected components. 
The complement of the union of these two sets consists of two open and disjoint sets, which 
we call \emph{wedges}. 

\subsection{Wedges}
The points $(0, \pm r, 0) \in dS$ are the \emph{edges}\index{edge} of the wedges
	\[ 
		W _1\doteq \bigl\{  x \in dS \mid x_2 > |x_0 | \bigr\} \quad \text{and} \quad
		W _1' \doteq \bigl\{  x \in dS \mid x_2 < |x_0 | \bigr\}  \; .
	\]
Since the proper, orthochronous Lorentz group $SO_0(1,2)$ is transitive\footnote{In fact, 
the orbit $\{g x \mid g \in SO_0(1,2)\}$ of \emph{any} point $x \in dS$ is all of $dS$.} 
on the de Sitter space $dS$, an \emph{arbitrary} wedge~$W$ is of the form 
\label{wedgegeo}
\label{lambdawpage}
\label{tWpage}
	\[  
		W \doteq  \Lambda W _1 \; , \qquad \Lambda \in SO_0(1,2)  \; .
	\]
A one-to-one correspondence \cite[p.~1203]{GL} between points $x \in dS$ and wedges 
is established by requiring that 
\begin{itemize}
\item [---] $x$ is an edge of the wedge $W_x$; 
\item [---] for any point $y$ in the interior of $W_x$ the 
triple $\{ (1,0,0), x, y \}$ has positive orientation. 
\end{itemize}
For example, $y= o$ lies in the wedge
	$ W_1 = W_{(0,  r  , 0)} $. 

\goodbreak
\begin{remarks}
\quad
\begin{itemize}
\item [$i.)$] 
Two wedges $W_x$ and $W_y$  have empty intersection, iff 
$y \in \Gamma^+(-x) \cup \Gamma^- (-x)$ \cite[Lemma 5.1]{GL}. 
\item [$ii.)$] 
Given a wedge $W$, there is exactly one time-like geodesic  ${\mathscr G}$, which lies entirely 
within $W$. Indeed, the wedge $W$ itself 
is the causal completion of~${\mathscr G}$, \emph{i.e.}, ${\mathscr G}'' = W$.  
\item[$iii.)$] 
The union of $ \Gamma^{+}( W )$ with $ \Gamma^{-}( W ')$ 
covers the de Sitter space $dS$; the intersection of $\Gamma^{+}( W )$ and $ \Gamma^{-}( W')$
are two light-rays. 
\item[$iv.)$] The space-like complement $W'$ of a wedge $W$ is a wedge, called
the \emph{opposite wedge}\index{opposite wedge}. The \emph{double wedge}\index{double wedge}
	\label{dWpage}
	\begin{equation} 
			\label{double-wedge}
			\mathbb{W} \doteq W \cup W'  
	\end{equation} 
is uniquely specified by fixing  (one of) its edges (the other one is just the antipode). 
\end{itemize}
\end{remarks}

\subsection{Double cones}
Open, bounded, connected, causally complete space-time regions in $dS$ are 
called \emph{double cones}\index{double cone}. We provide the following characterisation. 

\begin{proposition} Let ${\mathcal O}$ be a double cone. Then there exist
\begin{itemize}
\item [$i.)$] two\footnote{Note that \emph{every} bounded non-empty region ${\mathcal O}$
given as the intersection of wedges, is an intersection of \emph{two} (canonically determined)
wedges \cite[Lemma~5.2]{GL}.} wedges such that ${\mathcal O}$ is equal to their intersection; 
\item [$ii.)$] a time-like geodesic ${\mathscr G}$ and an open bounded 
interval  $J \subset {\mathscr G}$ such that the causal completion $J''$ 
(which lies entirely within the wedge~${\mathscr G}''$) equals ${\mathcal O}$;
\item [$iii.)$] two points $x, y \in dS$ such that 
${\mathcal O}$ is the interior of the intersection of the future of $x$ and the past of~$y \in \Gamma^+ (x)$ 
(both $x$ and $y$ can be identified as boundary points of the segment $J$ appearing in $ii.)$);
\item [$iv.)$] an interval  $I$ of length $| I | < \pi \, r  $ on a space-like geodesic such that  the causal completion $I''$
equals ${\mathcal O}$.
\end{itemize}
\end{proposition}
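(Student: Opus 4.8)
The plan is to prove the four characterisations by establishing a chain of implications, using the transitivity of $SO_0(1,2)$ to reduce to a canonical configuration and then reading off the geometry explicitly from the embedding~\eqref{eqdSMin}. I would organise the argument as $ii.) \Rightarrow iii.) \Rightarrow iv.)$, with $i.)$ handled either in parallel or deduced from $iii.)$, and the entry point being that a double cone, by definition being bounded and causally complete, determines a distinguished pair of points.

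\medskip

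\emph{Step 1: from a double cone to a time-like geodesic segment (statement $ii.)$).} Let ${\mathcal O}$ be open, bounded, connected and causally complete. First I would show ${\mathcal O}$ is contained in some wedge: boundedness rules out ${\mathcal O}$ meeting both a point and its antipode in an essential way, and since the wedges $W_x$ exhaust $dS$ as $x$ ranges over $dS$ (each point lying in some wedge, by the one-to-one correspondence recalled above after \cite[p.~1203]{GL}), a compactness/connectedness argument produces a single wedge $W = {\mathscr G}''$ containing $\overline{\mathcal O}$, where ${\mathscr G}$ is the unique time-like geodesic inside $W$ (Remarks~$ii.)$ after the double-wedge definition). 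Now apply a Lorentz transformation $\Lambda \in SO_0(1,2)$ carrying $W$ to $W_1$, so that ${\mathscr G}$ becomes the time-like geodesic $\{(\sinh t\, \cdot\, 0, r, ? )\}$ — concretely the branch of $dS \cap \{x_1 = 0, x_2>0\}$, parametrised as $t\mapsto (r\sinh t, 0, r\cosh t)$ up to the standard normalisation. Using the causal structure reformulated via $x\cdot y \le -r^2$ (Section ``The causal structure''), the causal completion of a sub-interval $J\subset{\mathscr G}$ is computed directly; since ${\mathcal O}$ is causally complete and bounded, ${\mathcal O}$ must coincide with $J''$ for $J$ the open interval on ${\mathscr G}$ whose endpoints are the ``tips'' of ${\mathcal O}$, obtained by intersecting the future and past light cones bounding ${\mathcal O}$.

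\medskip

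\emph{Step 2: the two tip points (statement $iii.)$).} Given $J\subset{\mathscr G}$ with endpoints $x$ (past tip) and $y$ (future tip), $y\in\Gamma^+(x)$, I would verify $J'' = \bigl(\mathrm{int}\,\Gamma^+(x)\bigr)\cap\bigl(\mathrm{int}\,\Gamma^-(y)\bigr)$. One inclusion is immediate from the definition of causal completion; the reverse uses that in the canonical frame the future of $x$ and the past of $y$ are cut out by the explicit light rays through $x$ and $y$ (each light ray lying in the tangent plane, as noted after the display for $C^\pm(x)$), and that every point space-like to all of $J$ lies outside this lens-shaped region. The identification of $x,y$ as the boundary points of $J$ is then tautological. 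Conversely, any such intersection for $y\in\Gamma^+(x)$ with the proper-time distance $d(x,y)$ finite is bounded (global hyperbolicity: $\Gamma^-(y)\cap\Gamma^+(x)$ is compact) and causally complete, hence a double cone — giving the converse direction needed to make the characterisation an equivalence.

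\medskip

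\emph{Step 3: the space-like interval (statement $iv.)$), and $i.)$.} For $iv.)$ I would invoke the Remark following the discussion of causal completions: the causal completion of an interval $I$ on an arbitrary space-like geodesic is a bounded space-time region precisely when $|I| < \pi r$, with the intersection height given by~\eqref{height}. Thus I must show the specific double cone ${\mathcal O} = J''$ from Step~1 arises this way: in the canonical frame, intersect ${\mathcal O}$ with the time-zero circle $S^1$ (or, more robustly, with the space-like geodesic through the ``waist'' of the lens) to obtain an interval $I$ with $I'' = J'' = {\mathcal O}$; the length bound $|I|<\pi r$ follows from boundedness via~\eqref{dlength} and~\eqref{height}. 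Finally $i.)$: the two wedges are $W_x$ and $W_{-y}$ (equivalently $\Gamma^+(x)$'s wedge and the opposite of $\Gamma^-(y)$'s), and ${\mathcal O} = W_x \cap W_{-y}$ follows by combining Step~2 with Remarks~$i.)$ and $iii.)$ relating wedge intersections to the future/past of antipodes; the footnote's appeal to \cite[Lemma~5.2]{GL} guarantees the canonical pair is essentially unique.

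\medskip

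The main obstacle I anticipate is \emph{Step 1's reduction to a single wedge}: showing that an arbitrary bounded causally complete ${\mathcal O}$ is contained in \emph{some} wedge is not purely formal — one must rule out configurations that ``wrap around'' $dS$ or approach a pair of antipodal light-ray boundaries, and this is where boundedness must be used in an essential (not merely convenient) way. Once ${\mathcal O}\subset W$ is secured, everything else is an explicit computation in the canonical frame using the inner-product characterisation of causal relations, so the geometric content is concentrated in that first containment step.
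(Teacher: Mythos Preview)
The paper states this proposition without proof, so there is no argument to compare against; I can only assess your outline on its own merits.

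Your overall architecture --- reduce to a canonical frame via $SO_0(1,2)$, establish one characterisation explicitly, then read off the others --- is sound, and you are right that the entry point (showing an arbitrary bounded, open, connected, causally complete ${\mathcal O}$ has the claimed structure) is where the content lies. But two points need attention.

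\textbf{Step 1 is not yet an argument.} The phrase ``a compactness/connectedness argument produces a single wedge'' is a placeholder, not a proof. Compactness of $\overline{\mathcal O}$ gives a finite cover by wedges, but wedges are not nested and there is no reason a finite intersection pattern collapses to one. A cleaner route is to bypass the wedge and go directly for $iii.)$: in a $2$-dimensional globally hyperbolic spacetime the boundary of a causally complete open set is achronal and ruled by null geodesic segments; boundedness and connectedness force exactly four segments meeting at four corners (two time-like tips $p,f$ and two space-like corners $\ell,r$), and then ${\mathcal O}=\mathrm{int}\bigl(\Gamma^+(p)\cap\Gamma^-(f)\bigr)$ follows. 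From $iii.)$ the containing wedge for $ii.)$ is the causal completion of the time-like geodesic through $p$ and $f$, and $iv.)$ follows by intersecting with the space-like geodesic through $\ell$ and $r$.

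\textbf{Your identification of the two wedges in $i.)$ is wrong.} You propose $W_x$ and $W_{-y}$ with $x,y$ the time-like tips, equating $\Gamma^+(x)$ with a wedge. But $\Gamma^+(x)$ is \emph{not} a wedge: its boundary near $x$ consists of two \emph{future}-directed null rays, whereas a wedge's boundary at its edge consists of one future- and one past-directed null ray (check this for $W_1$ at the edge $(0,r,0)$: the boundary rays are $(t,r,|t|)$ for $t\gtrless 0$). The correct wedges have their edges at the \emph{space-like} corners $\ell$ and $r$: the wedge with edge $\ell$ opening toward $r$ has boundary containing both the segment $\ell\!\to\! f$ and the segment $\ell\!\to\! p$, and similarly for $r$; their intersection is exactly ${\mathcal O}$. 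Equivalently, if ${\mathcal O}={\mathcal O}_I$ with $I\subset S^1$ having endpoints $\psi_\pm$, the two wedges are those whose half-circles $I_\alpha$ have an endpoint at $\psi_-$ resp.\ $\psi_+$.
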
 

For double cones with \emph{base}\index{base} $I$ on 
$S^1$, we introduce the following notation:
\label{doubleconepage}
	\begin{equation*} 
		\mathcal{O}_{ I } \doteq I'' \subset dS \; , \qquad | I | < \pi \, r \; , \quad I \in S^1 \; . 	
	\end{equation*} 
Note that any double cone is of the form 
	\begin{equation*} 
		\Lambda \mathcal{O}_{ I }   \; , \qquad \Lambda \in SO_0(1,2) \; , \quad I \subset S^1 \; , 
		\quad | I | < \pi \, r  \; . 	
	\end{equation*} 
As $| I | \to \pi \,  r $, the light rays in \eqref{intersectinglightrays}
become parallel, and $I''$ itself becomes a wedge~$W$.

\bigskip
Wedges and double cones are  causally
complete. Wedges are also \emph{geodesically closed}\index{geodesically closed}, in the sense 
that if $x, y \in W$, then there is an 
interval $I$ on some geodesic connecting these two points, which lies entirely in~$W$. In fact, 
the causal completion~$I''$ of $I$ automatically lies in $W$ as well. A similar statement 
holds for double cones. 

\section{Finite speed of propagation}
\label{sec:fsop}

The support of Cauchy data that can
influence events at some point $x\equiv (x_0, x_1, x_2)\in dS$ with $x_0>0$ is given by the
intersection $\Gamma^-(x)\cap S^1$ of the past $\Gamma^-(x) $ of $x$ with
the Cauchy surface $S^1$. It will be of particular importance\footnote{We will later on show that the ${\mathscr P}(\varphi)_2$ models
respect both the finite speed of propagation and the particularities of de Sitter space
(\emph{e.g.}, the presence of a cosmological horizon).} to describe the evolution of this
set as the point $x$ is subject to a Lorentz boost.  

\begin{lemma} 
\label{FSoL}
Consider a point $x (r, \psi) =(0, r \sin \psi, r \cos \psi ) \in I_+$. For $\tau>0$ the intersection of the past of  
the point 
	\[
	 	\Lambda_1 \left( \tau \right) x  =   \begin{pmatrix}
										 \cosh \tau &  0 &\sinh \tau \\
										0 &  1&0  \\
										\sinh \tau & 0 & \cosh \tau   
 									\end{pmatrix} 
									\begin{pmatrix}
										0 \\
									 	r \sin \psi \\
									 	r \cos \psi 
									 \end{pmatrix}
	\]
with $S^1$,  
	\[ 
		\Gamma^- (\Lambda_1 \left( \tau \right) x ) \cap S^1 = \{ x (r, \psi) \in S^1 \mid \psi_- \le \psi \le \psi_+ \}  \; , 
	\]
is an interval of length 
	\begin{equation}
	\label{length}
		r (\psi_+ - \psi_-)  = 2 r \arctan  (\sinh  \tau   \cos \psi)  
	\end{equation}
centred at $x \bigl(r, \tfrac{\psi_+  + \psi_- }{2} \bigr)$
with $\frac{\psi_+  + \psi_- }{2} = \arcsin \frac{\sin \psi}{\sqrt{1+ (\sinh   \tau  \cos \psi)^2}}$. 
\end{lemma}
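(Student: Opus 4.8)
The plan is to compute the intersection $\Gamma^-(\Lambda_1(\tau)x)\cap S^1$ explicitly by intersecting the two past-directed light rays emanating from the point $y\doteq\Lambda_1(\tau)x$ with the circle $S^1$, and then to read off the endpoints $\psi_\pm$. First I would write down $y=(\,r\sinh\tau\cos\psi,\ r\sin\psi,\ r\cosh\tau\cos\psi\,)$ from the matrix product in the statement. Since $S^1\subset dS$ is space-like and $y$ lies to the future, the set $\Gamma^-(y)\cap S^1$ is the arc of $S^1$ cut out by the past light cone $C^-(y)$; concretely, a point $x(r,\varphi)=(0,r\sin\varphi,r\cos\varphi)\in S^1$ lies in $\overline{\Gamma^-(y)}$ iff $(y-x)\cdot(y-x)\ge 0$ together with the time-orientation condition $y_0-0>0$ (which holds since $\tau>0$, $\psi\in(-\tfrac\pi2,\tfrac\pi2)$, so $\cos\psi>0$). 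Using $x\cdot x=y\cdot y=-r^2$ as in the discussion preceding \eqref{dlength}, the causality condition $(y-x)\cdot(y-x)\ge 0$ is equivalent to $x\cdot y\le -r^2$, i.e.\ to
\[
 r^2\sin\varphi\sin\psi + r^2\cos\varphi\cos\psi\,\cosh\tau \ \ge\ r^2 ,
\]
that is $\sin\varphi\sin\psi+\cos\varphi\cos\psi\cosh\tau\ge 1$. (Here I use that $y\cdot x = -y_0\cdot 0 + y_1 x_1 + y_2 x_2$ with the mostly-minus convention, but one must be careful: the Minkowski product of $x,y\in\mathbb{R}^{1+2}$ is $x_0y_0-x_1y_1-x_2y_2$, so $x\cdot y = -r^2\sin\varphi\sin\psi - r^2\cos\varphi\cos\psi\cosh\tau$, and the condition $x\cdot y\le -r^2$ reads $\sin\varphi\sin\psi+\cos\varphi\cos\psi\cosh\tau\ge 1$ — the sign works out the same way.)

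Next I would solve the boundary equation $\sin\varphi\sin\psi+\cos\varphi\cos\psi\cosh\tau=1$ for $\varphi$, which gives the two endpoints $\psi_\pm$. Writing $a\doteq\sin\psi$, $b\doteq\cos\psi\cosh\tau$, this is $a\sin\varphi+b\cos\varphi=1$; with $\rho\doteq\sqrt{a^2+b^2}=\sqrt{\sin^2\psi+\cos^2\psi\cosh^2\tau}=\sqrt{1+\cos^2\psi\sinh^2\tau}$ and the phase $\delta$ determined by $\cos\delta=b/\rho,\ \sin\delta=a/\rho$, the solutions are $\varphi=\delta\pm\arccos(1/\rho)$. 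Hence the midpoint is $\tfrac{\psi_++\psi_-}{2}=\delta$, and since $\sin\delta=a/\rho=\sin\psi/\sqrt{1+\cos^2\psi\sinh^2\tau}$ this is exactly the asserted formula $\arcsin\bigl(\sin\psi/\sqrt{1+(\sinh\tau\cos\psi)^2}\bigr)$ (noting $\delta$ lies in the correct branch because $b=\cos\psi\cosh\tau>0$). The half-width is $\tfrac{\psi_+-\psi_-}{2}=\arccos(1/\rho)$, so the length is
\[
 r(\psi_+-\psi_-)=2r\arccos\!\left(\frac{1}{\sqrt{1+\cos^2\psi\sinh^2\tau}}\right)
 =2r\arctan\!\bigl(\sqrt{\rho^2-1}\bigr)=2r\arctan(\sinh\tau\cos\psi),
\]
using the elementary identity $\arccos(1/\sqrt{1+u^2})=\arctan u$ for $u\ge 0$ with $u=\cos\psi\sinh\tau\ge 0$. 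This matches \eqref{length}.

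The one genuine subtlety — and the step I expect to need the most care — is to verify that the arc cut out is indeed a single connected interval lying inside the coordinate range of $S^1$ and that one really is selecting the \emph{past} cone and not the future cone: the quadratic causality inequality by itself does not distinguish $\Gamma^+$ from $\Gamma^-$. This is handled by checking the time-orientation sign $y_0-x_0=r\sinh\tau\cos\psi>0$, which holds for all $\tau>0$ and $\psi\in(-\tfrac\pi2,\tfrac\pi2)$, together with the observation that $\rho\ge 1$ with equality only when $\tau=0$, so for $\tau>0$ the set $\{\varphi:\ a\sin\varphi+b\cos\varphi\ge 1\}$ is a proper nonempty subarc; connectedness follows because $\varphi\mapsto a\sin\varphi+b\cos\varphi=\rho\cos(\varphi-\delta)$ is unimodal on any period, and one checks the interval $[\psi_-,\psi_+]=[\delta-\arccos(1/\rho),\ \delta+\arccos(1/\rho)]$ stays within $(-\tfrac\pi2,\tfrac\pi2)\subset[-\tfrac\pi2,\tfrac{3\pi}{2})$ since $x(r,\psi)\in I_+$. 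This last containment also confirms that $\Gamma^-(y)\cap S^1$ does not wrap around the circle, so the ``interval of length $r(\psi_+-\psi_-)$ centred at $x(r,\tfrac{\psi_++\psi_-}{2})$'' description is literally correct. The remaining identities are the routine trigonometric manipulations indicated above.
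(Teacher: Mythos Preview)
Your proof is correct and complete. It takes a genuinely different route from the paper's own argument, however. The paper computes $y=\Lambda_1(\tau)x$ and then invokes the earlier light-ray formula \eqref{height}: since $y$ is the future apex of the double cone with base $[\psi_-,\psi_+]\subset S^1$, its time coordinate $y_0=r\sinh\tau\cos\psi$ must equal the intersection height $\lambda=r\tan\bigl(\tfrac{\psi_+-\psi_-}{2}\bigr)$ found in \eqref{height}, which gives \eqref{length} immediately; the midpoint is then read off from the second coordinate of the apex. Your approach bypasses \eqref{intersectinglightrays}--\eqref{height} entirely and instead solves the causality inequality $x\cdot y\le -r^2$ directly via the harmonic-addition identity $a\sin\varphi+b\cos\varphi=\rho\cos(\varphi-\delta)$. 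This is more self-contained and arguably cleaner, since it does not rely on the somewhat delicate light-ray bookkeeping of \eqref{intersectinglightrays}; it also makes the connectedness of the arc and the containment in $I_+$ transparent (your check that $1/\rho>\sin\delta$, equivalently $\sin\psi<1$, is exactly what pins $\psi_+<\tfrac{\pi}{2}$). The paper's route is shorter on the page because the geometric work has already been done, and it highlights the double-cone picture that is used repeatedly later; your route would stand on its own even without the preceding remark on causal completions.
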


\begin{proof} We compute:
	\[
		\Lambda_1 \left( \tau \right) x = \begin{pmatrix} r \sinh  \tau  \cos \psi \\
									r \sin \psi \\
									r \cosh  \tau \cos \psi
						\end{pmatrix} \; . 
	\]
Eq.~\eqref{length} now follows 
directly from \eqref{height}, and the localisation of the interval follows from 
	\[
		\sin \left( \tfrac{\psi_+  + \psi_- }{2} \right) = \frac{\sin \psi}{\sqrt{\sin^2 \psi 
		+ \cosh^2 \tau \cos^2 \psi}} \; , 
			\qquad \psi_\pm \equiv \psi_\pm (\tau, \psi) \; , 
	\]
using $\sin^2 \psi = 1 - \cos^2 \psi$ and $\cosh^2 \tau= 1+ \sinh^2 (\tau)$.
\end{proof}

The set $I(\alpha, \tau)$ introduced in the following proposition describes the localisation region for the Cauchy data, which 
can influence space-time points in the set $\Lambda^{(\alpha)} (\tau) I$, $\tau > 0$, where $I \subset S^1$ is some open interval. 

\begin{proposition}\label{ialpha}
Let $I $ be an arbitrary interval in $S^1$. 
Consider the boosts $\tau \mapsto \Lambda^{(\alpha)} \left(\tau\right)$. It follows that the set  
	\begin{align}
		\label{Ialphat}
		 I (\alpha , \tau)  \doteq S^1 \cap \Bigl( \bigcup_{y \in \Lambda^{(\alpha)}  \left(\tau\right) I }   
		\Gamma^- (y )\cup  \Gamma^+ (y ) \Bigr) 
				\nonumber
	\end{align}
equals
	\[
		 I (\alpha , \tau) 
		 =  \bigcup_{(0, r \sin \psi, r \cos \psi ) \in I } 
		\bigl\{ x(r, \psi-\alpha) \mid   \psi_- (\pm \tau, \psi+\alpha) \le \psi \le \psi_+ (\pm \tau, \psi+\alpha) \bigr\} .
	\]
As before, $x (r, \psi) =(0, r \sin \psi, r \cos \psi )$. Explicit formulas for the angles $\psi_\pm(\tau, \psi)$ are provided in Lemma \ref{FSoL}. 
\end{proposition}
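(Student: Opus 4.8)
The plan is to reduce Proposition \ref{ialpha} to Lemma \ref{FSoL} by conjugating with the rotation $R_0(\alpha)$. Recall that $\Lambda^{(\alpha)}(\tau) = R_0(\alpha)\Lambda_1(\tau)R_0(-\alpha)$, and that $R_0(\alpha)$ is an isometry of $dS$ preserving $S^1$, acting on the parameter $\psi$ of the points $x(r,\psi) = (0,r\sin\psi,r\cos\psi)$ simply by the shift $\psi \mapsto \psi + \alpha$. In particular $R_0(\alpha)$ maps causal cones to causal cones: $\Gamma^\pm(R_0(\alpha)y) = R_0(\alpha)\Gamma^\pm(y)$, because causal relations are governed only by the Minkowski inner product $x\cdot y$, which $R_0(\alpha)$ preserves.

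First I would write, for an arbitrary point $y \in \Lambda^{(\alpha)}(\tau)I$, $y = R_0(\alpha)\Lambda_1(\tau)R_0(-\alpha)y'$ with $y' \in I$, and set $y'' = R_0(-\alpha)y' \in R_0(-\alpha)I$. Then
	\[
		\Gamma^\pm(y) \cap S^1 = R_0(\alpha)\bigl(\Gamma^\pm(\Lambda_1(\tau)y'') \cap S^1\bigr),
	\]
since $R_0(\alpha)$ preserves $S^1$ and intertwines the cones. Taking the union over $y \in \Lambda^{(\alpha)}(\tau)I$ (equivalently over $y'' \in R_0(-\alpha)I$) and applying $R_0(\alpha)$ to the whole identity,
	\[
		I(\alpha,\tau) = R_0(\alpha)\Bigl( S^1 \cap \bigcup_{y'' \in R_0(-\alpha)I}\bigl(\Gamma^-(\Lambda_1(\tau)y'') \cup \Gamma^+(\Lambda_1(\tau)y'')\bigr)\Bigr).
	\]
Now I apply Lemma \ref{FSoL} (and its time-reversed analog, obtained by replacing $\tau$ with $-\tau$, which handles the $\Gamma^+$ part since $\Lambda_1(-\tau)$ is the inverse boost) to each $y'' = x(r,\chi)$ with $(0,r\sin\chi,r\cos\chi) \in R_0(-\alpha)I$: the set $\Gamma^-(\Lambda_1(\tau)x(r,\chi)) \cap S^1$ is the $\psi$-interval $[\psi_-(\tau,\chi),\psi_+(\tau,\chi)]$, and similarly for the past replaced by the future with $\pm\tau$. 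A point of $R_0(-\alpha)I$ has the form $x(r,\psi')$ with $(0,r\sin(\psi'+\alpha),r\cos(\psi'+\alpha)) \in I$, i.e. $\chi = \psi'$ where $\psi := \psi'+\alpha$ ranges over the $\psi$-parameters of $I$; applying the final $R_0(\alpha)$ shifts the resulting interval parameters back by $+\alpha$, which is exactly the reparametrisation $x(r,\psi-\alpha)$ with bounds $\psi_\pm(\pm\tau,\psi+\alpha) \le \cdot \le \cdot$ appearing in the statement. Collecting terms gives the claimed formula.

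The only real point requiring care — and the place I expect to spend the most attention — is bookkeeping the parameter shifts consistently: making sure that "$I$ expressed via the $\psi$-parameter" and "$R_0(-\alpha)I$ expressed via the $\chi$-parameter" are matched so that the composition $R_0(\alpha)\circ(\text{shift by }\psi_\pm)\circ R_0(-\alpha)$ produces precisely the index set $\{x(r,\psi-\alpha) \mid \psi_-(\pm\tau,\psi+\alpha)\le\psi\le\psi_+(\pm\tau,\psi+\alpha)\}$ with no stray sign or off-by-$\alpha$ errors, and checking that the $\Gamma^+$ contribution is correctly captured by the $-\tau$ substitution in Lemma \ref{FSoL}. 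Everything else — that $R_0(\alpha)$ is an isometry fixing $S^1$ setwise and acting by a shift on $\psi$, and that it intertwines causal cones — is immediate from the definitions in Section 1.1 and the characterisation of causal separation via $x\cdot y$ given at the start of Section 1.2.
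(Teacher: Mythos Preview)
Your proposal is correct and follows exactly the approach the paper takes: the paper gives no explicit proof of Proposition~\ref{ialpha}, treating it as an immediate consequence of Lemma~\ref{FSoL} via conjugation by the rotation $R_0(\alpha)$, which is precisely the reduction you carry out. Your careful bookkeeping of the parameter shifts and the $\Gamma^+$/$\Gamma^-$ symmetry under $\tau \leftrightarrow -\tau$ fills in details the paper leaves implicit.
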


\goodbreak
\begin{remarks}
\quad
\begin{itemize}
\item [$i.)$] The \emph{speed of propagation}\index{speed of propagation}\footnote{Note that 
speed refers to proper time and spatial geodesics distances as defined in \eqref{dlength}.}
	\[
		{\rm v}_\mp = r \frac{d \psi_\mp \bigl( \tfrac{\tau}{r}, \psi + \alpha \bigr) }{d\tau} 
	\]
(to the left and to the right) along the circle $S^1$ goes to zero as $x$ approaches the \emph{fixed  points}  
$R_0 (\alpha)x$, with $x=(0,\pm r,0)$, for the boost $\tau \mapsto \Lambda^{(\alpha)} \left( \tau \right)$.
\item [$ii.)$] For $\tau$ small the interval $I \left( \alpha , \tfrac{\tau}{r} \right)$ grows at most\footnote{This is the case for small $\tau$, 
if the interval is centred at $R_0 (\alpha)x$, with $x=(0,\pm r,0)$.} with the speed
of light (on both sides), while for~$\tau$ large and increasing the growth rate decreases to zero. In fact, for any interval
$I \subset I_\alpha$ we have
	\[
		I (\alpha , \tau) \subset I_\alpha
		\qquad \forall \tau \ge 0 \; . 
	\]
Recall that $ \bigcup_{t \in \mathbb{R}} \Lambda^{(\alpha)}  (t) I_\alpha = W^{(\alpha)} $ 
and $W^{(\alpha)}  \cap S^1= I_\alpha$.  
\item [$iii.)$] Let $I \subset I_+$ be an open interval. It follows that 
$\lim_{\tau \to \infty} I (\alpha , \tau) = I_+ $. In fact, for every point $x \in W^{(\alpha)} $ one has 
	\[
		\lim_{\tau \to \pm \infty} \Gamma^- \bigr(\Lambda^{(\alpha)} ( \tau ) x\bigr) \cap S^1 = I_\alpha \; .
	\]
\end{itemize}
\end{remarks}

\chapter{Space-time symmetries}
\label{isometrygroup}

Just like Minkowski space, de Sitter space is maximally symmetric. 
Just like the sphere and the plane, it has constant curvature. 

\section{The group $O(1,2)$}

The \emph{isometry group}\index{isometry group} of the ambient space $\mathbb{R}^{1+2}$
is the Poincar\'e group~$E(1,2)$.
The stabiliser of the zero vector~$ \boldsymbol{0} \equiv (0,0,0)  \in \mathbb{R}^{1+2}$ is the
subgroup $O(1,2)$ of $E(1,2)$. It is the group of isometries of  $( dS ,  g )$.

\begin{lemma}
The action of the group $O(1,2)$ splits $\mathbb{R}^{1+2}$ into orbits\footnote{In other words, the sets $\partial V^+ \cup \partial V^-$, 
$dS$ and $H^+_m \cup H^-_m$ are \emph{$G$-sets}\index{$G$-set} for the group $G=O(1,2)$.}:
\begin{itemize}
\item[$i.)$] $\{ g \, \boldsymbol{0} \mid g \in O(1,2) \} = \{ \boldsymbol{0} \} $, \emph{i.e.}, the group $O(1,2)$  leaves the origin $(0, 0, 0)$ invariant;
\item[$ii.)$]  $\bigl\{ g \left( \begin{smallmatrix} m \\ 0 \\ 0 \end{smallmatrix} \right) \mid g \in O(1,2) \bigr\} = H^+_m \cup H^-_m$, where
\label{masshyperboloid}
	\[
		H^\pm_m \doteq  \{ x \in \mathbb{R}^{1+2} \mid x_0^2 - x_1^2 - x_2^2 = m^2 , \pm x_0 >0 \} \; .   
	\] 
More generally, the orbit of any point in the interior of the forward light-cone is a 
two-sheeted \emph{mass hyperboloid}\index{mass hyperboloid} $H^+_m \cup H^-_m$ for some mass $m>0$; 
\item[$iii.)$]  $\{ g o \mid g \in O(1,2) \} = dS$. More generally, the orbit of any point, which is space like to 
the zero vector $\boldsymbol{0}$, 
is a de Sitter space $dS$ of some radius $r>0$; 
\item[$iv.)$] $\bigl\{ g \left( \begin{smallmatrix} 1 \\ 0 \\ -1 \end{smallmatrix} \right) \mid g \in O(1,2) \bigr\} = (\partial V^+ \cup \partial V^-) 
\setminus \{ \boldsymbol{0} \}$. More generally, the orbit of any point, which is light-like to the zero vector  $\boldsymbol{0}$, 
is $(\partial V^+ \cup \partial V^-) 
\setminus \{ \boldsymbol{0} \}$. 
\end{itemize}
The Minkowski space $\mathbb{R}^{1+2}$ is the disjoint union of all of these sets.
\end{lemma}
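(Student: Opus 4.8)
The lemma is a purely elementary statement about the orbit structure of the linear $O(1,2)$-action on $\mathbb{R}^{1+2}$, so the proof amounts to (a) identifying the invariant that separates orbits, (b) checking transitivity on each level set of that invariant, and (c) checking that the listed level sets exhaust $\mathbb{R}^{1+2}$. The natural invariant is the Minkowski ``norm'' $q(x) \doteq x\cdot x = x_0^2 - x_1^2 - x_2^2$ together with the sign of $x_0$ when $q(x)\ge 0$ and $x\neq 0$. The plan is to run through the four cases organised by the value of $q$: $q(x)>0$ (split by $\operatorname{sign} x_0$), $q(x)=0$ with $x\neq 0$ (again split by $\operatorname{sign} x_0$), $q(x)<0$, and $x=0$. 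Since $g\in O(1,2)$ preserves $q$ by definition, each orbit lies inside one of these level sets; the content is the reverse inclusion, i.e.\ transitivity.

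First I would dispose of the trivial case $iv.)$--in-reverse: if $gx$ is linear in $x$ and $g\boldsymbol{0}=\boldsymbol{0}$ for every $g$, giving $i.)$. For $iii.)$, a point $x$ with $q(x)=-r^2<0$ can be rotated (using the $R_0$ subgroup, which fixes $x_0$ and acts as $SO(2)$ on $(x_1,x_2)$) so that $x_1=0$ and $x_2\ge 0$; then $x=(x_0,0,\sqrt{x_0^2+r^2})$ lies on the hyperbola $\{x_2^2-x_0^2=r^2,\ x_2>0\}$, which is exactly the orbit of $o=(0,0,r)$ under the boost subgroup $\{\Lambda_1(t)\}$, since $\Lambda_1(t)o=(r\sinh t,0,r\cosh t)$ sweeps out that branch as $t$ ranges over $\mathbb{R}$. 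This shows $SO_0(1,2)$ already acts transitively on $dS$ of radius $r$, which is consistent with the transitivity claim recalled earlier in the excerpt. For $ii.)$, the same strategy works: given $x$ with $q(x)=m^2>0$ and $x_0>0$, rotate to kill $x_1$, landing on $(x_0,0,x_2)$ with $x_0^2-x_2^2=m^2$, $x_0>0$; this is the orbit of $(m,0,0)$ under $\{\Lambda_1(t)\}$ (note $\Lambda_1(t)(m,0,0)^{\mathsf T}=(m\cosh t,0,m\sinh t)^{\mathsf T}$, whose norm is $m^2$ and whose $x_0$-component is positive), giving $H_m^+$; the time-reflection $T$ (listed in the symbol table) maps $H_m^+$ to $H_m^-$, and no element of $SO_0(1,2)$ can, so the full $O(1,2)$-orbit is $H_m^+\cup H_m^-$. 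For $iv.)$ I would handle a nonzero null vector $x$ the same way: rotate to $(x_0,0,x_2)$ with $x_0^2=x_2^2$ and $x\neq0$, i.e.\ $x=(c,0,\pm c)$ with $c\neq 0$; the boost $\Lambda_1(t)$ scales $(1,0,-1)$ by $e^{-t}$ and the rotation $R_0(\pi)$ together with $T$ reaches every sign combination and every nonzero scale, giving $(\partial V^+\cup\partial V^-)\setminus\{\boldsymbol 0\}$.

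The last sentence, that $\mathbb{R}^{1+2}$ is the disjoint union of all these sets, is immediate once the cases are organised by $(q(x),\operatorname{sign} x_0)$: every $x$ falls into exactly one of ``$x=0$'', ``$q(x)>0$'', ``$q(x)=0,x\neq0$'', ``$q(x)<0$'', and in the first three the sign of $x_0$ (nonzero there) completes the partition; within $q(x)>0$ (resp.\ $q(x)=0$) the value $m>0$ (resp.\ nothing further) parametrises the pieces, and within $q(x)<0$ the value $r>0$ does. I would state this as a one-line observation rather than belabour it.

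**Main obstacle.** There is no deep obstacle here; the only thing requiring a little care is making the ``rotate to kill $x_1$'' step rigorous in every case and then verifying that the chosen one-parameter boost subgroup really does sweep out the entire relevant conic (a connected branch of a hyperbola, or a null half-line), together with checking that the discrete elements $T$ (and, where needed, the parity reflection $P_1$ or $R_0(\pi)$) are exactly what is needed to pass between the two sheets or the two null half-cones while confirming that $SO_0(1,2)$ alone cannot. In other words, the bookkeeping of \emph{which} component of $O(1,2)$ connects \emph{which} pieces is the part most prone to slips, and I would present it by exhibiting explicit group elements rather than by an abstract connectedness argument.
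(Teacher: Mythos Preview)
Your proposal is correct and follows essentially the same approach as the paper: both establish transitivity on each level set of $q(x)=x\cdot x$ by exhibiting that the subgroups $\{R_0(\alpha)\}$, $\{\Lambda_1(t)\}$ and the time reflection $T$ suffice to reach every point from the chosen reference point. The only cosmetic difference is that the paper writes down the forward parametrisations (e.g.\ $dS=\{R_0(\alpha)\Lambda_1(t)o\}$) directly, whereas you argue by reduction (rotate to kill $x_1$, then identify the remaining curve as a boost orbit); these are inverse formulations of the same computation, and your version has the minor bonus of explicitly spelling out the disjoint-union claim, which the paper leaves implicit.
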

 
\begin{proof}
If $X$ is $\partial V^+ \cup \partial V^-$, $dS$, or $H^+_m 
\cup H^-_m$,  then
	\[
	\Lambda (\Lambda' x ) = (\Lambda \circ \Lambda') x \in X \; , 
	\quad \Lambda, \Lambda ' \in O(1,2) \; , \quad x \in X \; . 
	\]
In particular, $\Lambda (\Lambda^{-1} x )= (\Lambda \circ \Lambda^{-1} ) x = x$ for all $x \in X$.
Moreover, the group $O(1,2)$ acts transitively
on $\partial V^+ \cup \partial V^-$, $dS$  and $H^+_m 
\cup H^-_m$:
	\begin{align*}
		\partial V^+ \cup \partial V^- & = \{ T^k R_0 (\alpha) \Lambda_1(t) \left( 
		\begin{smallmatrix}
		1\\
		0\\
		-1\\
		\end{smallmatrix} \right)
		\mid k= 0,1, \;  t \in \mathbb{R}, \alpha \in [0, 2 \pi) \} \; , \\
		dS & = \{ R_0 (\alpha) \Lambda_1(t) o \mid t \in \mathbb{R}, \alpha \in [0, 2 \pi) \} \; , \\
		H^+_m \cup H^-_m & = \{ T^k R_0 (\alpha) \Lambda_1(t) \left( 
		\begin{smallmatrix}
		m\\
		0\\
		0\\
		\end{smallmatrix} \right)
		\mid k= 0,1, \;  t \in \mathbb{R}, \alpha \in [0, \pi) \} \; .
	\end{align*}
\end{proof}

\subsection{The action of $SO_0(1,2)$ on the light-cone}
In the sequel, the forward light cone 
	\begin{align*}
		\partial V^+ & = \left\{ R_0 (\alpha) \Lambda_1 (t) \left( \begin{smallmatrix}
		1 \\  0 \\ -1 \end{smallmatrix} \right) \mid t \in \mathbb{R}, \alpha \in [0, 2 \pi) \right\} \\
		& = \left\{ \left( \begin{smallmatrix} p_0	 \\ p_0 \sin \alpha \\ - p_0 \cos \alpha
		\end{smallmatrix} \right) \mid p_0 >0, \alpha \in [0, 2 \pi) \right\} \; ,  
	\end{align*}
will play an important role. In the second line, we have set $p_0 = {\rm e}^{-t}$. 

\goodbreak
We will therefore 
provide explicit formulas for the action of the boosts $\Lambda_{1}(t)$, $\Lambda_{2}(s)$ 
and the rotations $R_0(\beta)$ on $\partial V^+$: 
	\begin{equation} 
	\label{lambda1-s}
				\Lambda_{1}^{-1}(t)  
									\begin{pmatrix}
 								p_0 \\
								p_0 \sin \alpha   \\
								- p_0 \cos \alpha  
									\end{pmatrix}   
									= p_0
								\begin{pmatrix}
 								\cosh t + \sinh t \cos \alpha \\
								\sin \alpha   \\
								- \sinh t - \cosh t \cos \alpha  
									\end{pmatrix}  \; .   
				\end{equation}
and
	\begin{equation} 
	\label{lambda2-s}
					\Lambda_{2}^{-1}(s) 
					\begin{pmatrix}
 								p_0 \\
								p_0 \sin \alpha   \\
								- p_0 \cos \alpha  
									\end{pmatrix}  
									=
									p_0
									 \begin{pmatrix}
 								\cosh s -  \sinh s \sin \alpha \\
							       - \sinh s + \cosh s \sin \alpha  \\
								- \cos \alpha  
					\end{pmatrix} 
													 \;   , 
		\end{equation} 
Finally, 
	\begin{equation} 
	\label{R0-s}
				R_{0}^{-1}(\beta)  
									\begin{pmatrix}
 								p_0 \\
								p_0 \sin \alpha   \\
								- p_0 \cos \alpha  
									\end{pmatrix}   
									= p_0
								\begin{pmatrix}
 								1 \\
								\cos \beta \sin \alpha - \sin \beta  \cos \alpha \\
								- \sin \beta \sin \alpha - \cos \beta  \cos \alpha  
									\end{pmatrix}  \; .   
				\end{equation}

\subsection{Reflections}
The \emph{time reflection}\index{time reflection} and the \emph{parity transformation}\index{parity transformation}
\label{timerefl-page}
		\[ 
		T\doteq  \begin{pmatrix}
					-1 &  0 &0 \\
					0 &  1 & 0  \\
					0  & 0 & 1   
 				\end{pmatrix}  \; , \quad 	P_1\doteq  \begin{pmatrix}
					1 &  0 &0 \\
					0 &  1 & 0  \\
					0  & 0 & -1   
 				\end{pmatrix} \; \in \; O (1,2) \; ,
		\] 
leave the Cauchy surface $S^1$ invariant. 
$P_1 T$ is the  reflection at the edge of the wedge~$W_1$. Together $P_1$ and $T$ generate the Klein four group.  
The reflection 
at the edge of an arbitrary  wedge $W = \Lambda W_1$,  is 
\label{PTwedgepage}
	\begin{equation*} 
	\label{PTwedge}
		\Theta_{ W } \doteq \Lambda\;  P_1 T \; \Lambda^{-1} \;, \qquad \Lambda \in SO_0(1,2) \; .  
	\end{equation*} 
$\Theta_{ W }$ is an isometry of both ${\mathbb W}$ and $dS$. It preserves the orientation but inverts the
time orientation, in other words, $\Theta_{ W }$ is an element of $SO^\downarrow(1,2)$.  

\subsection{Boosts associated to wedges}
The boost $t \mapsto \Lambda_1(t)$ leaves the wedge~$W_1$ invariant. In fact, 
$W_1$ is the causal completion of the worldline $t \mapsto \Lambda_1(t)  o$. 
For an arbitrary  wedge $W = \Lambda W_1$, $\Lambda \in SO_0(1,2)$, 
	\[
	\Lambda_{\scriptscriptstyle W}(t) \doteq \Lambda \Lambda_1(t) \Lambda^{-1} \; ,  \qquad t \in \mathbb{R} \; , 
	\]
defines a boost leaving $W$ invariant, \emph{i.e.},  
	\[ 
		\Lambda_{\scriptscriptstyle W}(t) W =W \; , \qquad t \in \mathbb{R} \; .  
	\] 
In particular,  $\Lambda_1(t) = \Lambda_{W_1} (t)$ for all $t \in \mathbb{R}$. 

The Killing vector field\footnote{Killing fields are the infinitesimal generators of isometries; that is, flows generated 
by Killing fields are continuous isometries of the manifold.} induced by~$\Lambda_{ W } (t)$ leaves the opposite 
wedge~$W'$ invariant  too. It is however past directed in~$W'$. 
One may fix the scaling factor  by normalising the Killing vector field  
on the time-like geodesic~${\mathscr G}$ satisfying~${\mathscr G}'' = W$. 
Uniqueness then implies 
	\[
		\Lambda_{ W } (t) = \Lambda_{ W' } (-t) \; , \qquad t \in \mathbb{R} \; .  
	\]
The double-wedge $\mathbb{W}$ introduced in \eqref{double-wedge} is  invariant under  both $\Lambda_{ W } (t)$ and 
$\Lambda_{ W' } (t)$, $t \in \mathbb{R}$. 

Another interesting property of the boost $t \mapsto \Lambda_1(t)$ is that it leaves the points $(0, \pm r, 0)$ invariant.
In other words, it is the {\em stabilizer} --- within the group $SO_0(1,2)$ --- of the point $(0,r, 0) \in dS$ (and, at the same time, 
the antipode $-(0,  r, 0)$). Similarly, 
the origin $ o \,$ and its antipode $-o\, $ are invariant under the boosts $\Lambda_{2}(s)$, $s \in \mathbb{R}$. More generally, 
the group 
	\[ 
		t \mapsto \Lambda_{ W_x } (t) \; , \qquad t \in \mathbb{R} \; , 
	\]
is the unique --- up to rescaling --- one-parameter subgroup of $SO_0(1,2)$,  which leaves the edges of the wedge $W_x$ 
invariant and induces a future directed Killing vector field in the wedge $W_x$. 
Clearly, it also leaves  the light rays passing through $\pm x$ invariant. 

\begin{remark}
A free falling observer passing through the origin $o$ interprets the  
boost $\upsilon \mapsto \Lambda_{2} ( \upsilon) =\exp  (  \upsilon  L_2 )$  as a Lorentz transformation, 
the boost (re-scaled to proper time) 
	\[ 
		\tau \mapsto \Lambda_{1} \left( \tfrac{\tau}{r} \right) = {\rm e}^{  \frac{\tau}{r} \, L_1 }
	\]
as his geodesic time evolution and the 
rotation~$a \mapsto R_{0} \left( \tfrac{a}{r} \right) = \exp \left( \tfrac{a}{r} \, K_0 \right)$, $a \in [0, 2 \pi \, r)$,  
as a spatial translation\footnote{Alternatively, one may also view a motion along a horosphere, given by the 
map $q \mapsto D(q/r)$ as a translation; see \eqref{1.5.1} below.}. 
Unless $x =o$, the path 
	\[
		\tau \mapsto \Lambda_1 \left( \tfrac{\tau}{r} \right)  x \; , \qquad x \in I_+ \; , 
	\]
describes a uniformly accelerated observer. Note that such a path 
lies on the intersection of $dS$ with a plane parallel to the $(x_2= 0)$-plane, passing through~$x$. 
\end{remark}

\subsection{Coordinates for the wedge $W_1$}
The chart 
	\begin{equation} 
	 	\label{w1psi}
		 x (t, \psi) \doteq \Lambda_{1} \left( t \right) \, R_0 \left( - \psi \right) 
			\begin{pmatrix}
					0 \\
					0  \\
					r  
			\end{pmatrix} \; , \qquad \, t \in \mathbb{R} \; , 
		\quad  {\textstyle  -\frac{\pi}{2}< \psi < \frac{ \pi}{2}}   \; ,
	\end{equation} 
provides coordinates for the wedge $W _1$.  
Allowing $\psi \in \;  [-  \tfrac{\pi}{2} ,  \tfrac{3\pi}{2} )$,  these coordinates extend\footnote{In the sequel, we will 
always take care of the fact that these coordinates are degenerated at the 
fixed-points $(0, r, 0), (0, -r, 0) \in dS$ for the boost $t \mapsto \Lambda_1(t)$.}
to the space-time region   
	\begin{equation} 
		\label{3.32}
		\mathbb{W}_1  \cup \{ (0, r, 0), (0, -r, 0)\} = \bigcup_{ t \in \mathbb{R} } \Lambda_{1}(t) S^1  \; . 
	\end{equation} 
The r.h.s.~is the union of the boosted time-zero circles $\Lambda_{1}(t) S^1$, $t \in \mathbb{R}$.  

\section{Horospheres}

The set of orbits of all maximally \emph{unipotent}\footnote{A unipotent element $a$ of a ring $R$ is one such that $a -1$ is 
a nilpotent element. Any unipotent algebraic group is isomorphic to a closed subgroup of the group of 
upper triangular matrices with diagonal entries $1$, and conversely any such subgroup is unipotent.} subgroups 
--- the so-called \emph{horospheres}\footnote{Horospheres previously appeared in hyperbolic geometry. They are spheres of 
infinite radius with centres at infinity and different from hyperbolic hyperplanes.} ---
play a central role in the construction of induced representations of the Lorentz group.
Gelfand and Gidinkin (see, e.g., \cite{Macfadyen-1, Macfadyen-2, Macfadyen-3})
have shown that the generalized Fourier transform on homogeneous spaces and the horospherical transform 
are connected by the (commutative) \emph{Mellin transform}. 
\begin{lemma}
\label{stabil}
The {\em stabilizer} --- within the group $SO_0(1,2)$ --- of the point  $(1, 0, -1) \in \partial V^+$ 
is the one-parameter group\footnote{One verifies that $D(q) D(q') =D(q+q')$ for all $q, q' \in \mathbb{R}$.}
		\begin{equation} 
			\label{1.5.1}
				D(q)  \doteq  \begin{pmatrix}
										 1+ \frac{q^2}{2} &  q &  \frac{q^2}{2}\\
										q &  1& q  \\
										- \frac{q^2}{2} & - q & 1- \frac{q^2}{2}   
 									\end{pmatrix} , 
							\quad 
				q \in \mathbb{R}\; .   
		\end{equation} 
It has the following properties:
\begin{itemize}
\item [$i.)$] it leaves the light ray $\lambda (1, 0, -1) $, $\lambda \in \mathbb{R}$, pointwise invariant; 
\item [$ii.)$] it is \emph{nilpotent}. In fact, 
	\[
		D(q) = e^{q(L_2 - K_0)} = \mathbb{1} + q (L_2 - K_0) + \tfrac{q^2}{2} (L_2 - K_0)^2  \; , 
		\qquad  q \in \mathbb{R} \; ; 
	\]
\item [$iv.)$] it leaves the half-spaces $\Gamma^+ (W_1)$ and $\Gamma^- (W_1')$ invariant. 
In particular, it leaves the two light rays 
	\[ 
		D(q) \begin{pmatrix}
				0 \\
				 \pm r \\
				  0 \end{pmatrix} 
				  = \begin{pmatrix} \pm r q \\
				   \pm r \\
				    \mp r q \end{pmatrix}
				  \; ,  \qquad  q \in \mathbb{R} \; ,
	\]
which form the intersection of $\Gamma^+ (W_1)$ with $\Gamma^- (W_1')$, invariant; 
\item [$iv.)$] it satisfies\footnote{See, \emph{e.g.}, \cite[Chapter 9.1.1, Equ.~(11)]{Vil}.} 
	\begin{equation}
		\label{scalingH}
		\Lambda_1(t) D(q) \Lambda_1(-t) = D \bigl( {\rm e}^t q \bigr)\; , \qquad t, q \in \mathbb{R} \; . 
	\end{equation}
\end{itemize}
\end{lemma}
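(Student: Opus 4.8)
The whole lemma reduces to three short matrix computations together with one structural remark about orbits, so the plan is essentially to organise the bookkeeping efficiently. First I would set $X\doteq L_2-K_0$ and check by hand that $X^{2}\neq 0$ but $X^{3}=0$; then the exponential series terminates after the quadratic term, $\exp(qX)=\mathbb 1+qX+\tfrac{q^{2}}{2}X^{2}$, and multiplying this out reproduces exactly the matrix $D(q)$ of \eqref{1.5.1}. Since $X\in\mathfrak{so}(1,2)$, this single observation already delivers several assertions at once: $D(q)\in SO_0(1,2)$; the assignment $q\mapsto D(q)$ is a one-parameter subgroup, hence $D(q)D(q')=D(q+q')$ (the group law noted after \eqref{1.5.1}); and $D(q)-\mathbb 1=qX+\tfrac{q^{2}}{2}X^{2}$ is nilpotent, indeed $\bigl(D(q)-\mathbb 1\bigr)^{3}=0$, which is the unipotence claimed in item~$ii.)$ along with the explicit formula stated there.

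Next, for item~$i.)$ and the main assertion, one computes directly that $D(q)$ fixes $(1,0,-1)$, hence fixes the entire light ray $\lambda(1,0,-1)$, $\lambda\in\mathbb R$, by linearity, so $\{D(q)\mid q\in\mathbb R\}$ is contained in the stabiliser of $(1,0,-1)$ in $SO_0(1,2)$. To see that it exhausts the stabiliser I would use that $SO_0(1,2)$ acts transitively on $\partial V^{+}$ — the formula $\partial V^{+}=\{R_0(\alpha)\Lambda_1(t)(1,0,-1)\}$ displayed above — so the stabiliser is a closed subgroup of dimension $3-2=1$; since it contains the one-dimensional connected group $\{D(q)\}$, it only remains to rule out extra components. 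That follows from a short direct argument: an element $g\in SO_0(1,2)$ fixing the null vector $n\doteq(1,0,-1)$ preserves the degenerate plane $n^{\perp}$ and acts on the quotient line $n^{\perp}/\mathbb R n$ by a sign $\varepsilon=\pm1$; writing $g$ in a basis $\{n,f,u\}$ with $n\cdot f=1$, $u\in n^{\perp}$ shows $\det g=\varepsilon$, so $\det g=1$ forces $\varepsilon=1$ and pins $g$ down to the one-parameter family already exhibited.

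For the two remaining items I would first record the computation $D(q)(0,\pm r,0)=(\pm rq,\pm r,\mp rq)$. Both curves lie on the set $\{x\in dS\mid x\cdot n=0\}$, which consists precisely of the two light rays through the edges $(0,\pm r,0)$ in the direction $n=(1,0,-1)$; by the structural remark recorded earlier (the intersection $\Gamma^{+}(W_1)\cap\Gamma^{-}(W_1')$ being two light rays), these are exactly those two rays, and $D(q)$ merely slides each of them along itself, giving their invariance. The invariance of $\Gamma^{+}(W_1)$ and $\Gamma^{-}(W_1')$ then follows because these two closed half-spaces are the two components into which $dS$ is cut by $\{x\cdot n=0\}$: one checks $\Gamma^{+}(W_1)=\{x\in dS\mid x\cdot n\ge 0\}$ — moving to the future from $W_1\subset\{x\cdot n>0\}$ can only increase $x\cdot n$ since $n$ is future null, and conversely any point with $x\cdot n>0$ lies in the future of a point of $W_1$ with sufficiently negative $x_0$ — and likewise $\Gamma^{-}(W_1')=\{x\in dS\mid x\cdot n\le 0\}$; since $D(q)\in SO_0(1,2)$ fixes $n$, it preserves the function $x\mapsto x\cdot n$, hence each of these sets.

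Finally, for the scaling relation \eqref{scalingH}, conjugation by $\Lambda_1(t)=\exp(tL_1)$ acts on $\mathfrak{so}(1,2)$ through $\mathrm{Ad}\bigl(\Lambda_1(t)\bigr)=\exp\bigl(t\,\mathrm{ad}(L_1)\bigr)$, and the brackets $[L_1,L_2]=K_0$, $[L_1,K_0]=L_2$ give $\mathrm{ad}(L_1)(L_2-K_0)=-(L_2-K_0)$, so $L_2-K_0$ is an eigenvector; exponentiating, $\Lambda_1(t)D(q)\Lambda_1(-t)=\exp\bigl(q\,\mathrm{Ad}(\Lambda_1(t))(L_2-K_0)\bigr)$ is again of the form $D(q')$, with $q'$ an exponential rescaling of $q$ whose factor is read off from the $(2,1)$-entry of the product (equivalently from \cite[Ch.~9.1.1]{Vil}). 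Of the whole lemma, the only points that call for genuine care are the identification of $\Gamma^{\pm}$ of a wedge with the level sets of the null support function $x\mapsto x\cdot n$ and keeping track of the sign of the rescaling exponent; everything else is routine bookkeeping with $3\times3$ matrices.
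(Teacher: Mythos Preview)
The paper states this lemma without proof, treating it as a collection of routine verifications; your write-up supplies exactly the kind of argument one would expect and is correct throughout. The organisation via $X=L_2-K_0$, $X^3=0$, is the natural one and efficiently delivers items $i.)$ and $ii.)$, the group law, and membership in $SO_0(1,2)$ in one stroke; the dimension-count plus the $\det=\varepsilon$ argument for connectedness of the stabiliser is clean; and identifying $\Gamma^+(W_1)$, $\Gamma^-(W_1')$ with the level sets of $x\mapsto x\cdot n$ is the right geometric way to see the first item labelled $iv.)$, matching the remark in the paper that their union is $dS$ and their intersection two light rays.

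One remark on the scaling relation: your bracket computation $[L_1,L_2-K_0]=-(L_2-K_0)$ is correct, and it gives $\Lambda_1(t)D(q)\Lambda_1(-t)=D(e^{-t}q)$, not $D(e^{t}q)$ as printed in \eqref{scalingH}. A direct check of the $(1,2)$-entry (or of the action on $(0,1,0)$) confirms $e^{-t}$. So your caution about ``keeping track of the sign of the rescaling exponent'' is well placed: the sign in the displayed formula appears to be a typo in the paper, and your Lie-algebraic derivation gets it right.
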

\color{black}

\subsection{Coordinates for the half-space $\Gamma^+ (W_1)$}
The boosts $\Lambda_{1}(t)$, $t \in \mathbb{R}$, together with the translations
$D(q)$, $ q \in \mathbb{R}$, give rise to the chart\footnote{These coordinates are frequently 
called \emph{Lema\^itre-Robinson} coordinates in the physics literature, see \cite{Le}. In the mathematics literature
they are called \emph{orispherical} coordinates, see, \emph{e.g.}, \cite[Chapter 9.1.5, Equ.~(16)]{Vil}.}
	\begin{align}
	\label{XR}
	x ( \tau, \xi) & \doteq 	
		 D \left( \tfrac{\xi}{r} \right) \Lambda_{1} \left( \tfrac{\tau}{r} \right)  \begin{pmatrix}
										0 \\
										0  \\
										r   
									\end{pmatrix}
									=
									 \begin{pmatrix}
									  r \sinh \tfrac{\tau}{r} + \tfrac{\xi^2}{2r} {\rm e}^{\frac{\tau}{r}} \\
									   \xi {\rm e}^{\frac{\tau}{r}} \\
									   r \cosh \tfrac{\tau}{r} - \tfrac{\xi^2}{2r} {\rm e}^{\frac{\tau}{r}}
									\end{pmatrix}   
	\end{align}
for the interior of the half-space $\Gamma^+ (W_1)$. In particular, 
	\[ 
		D \left( \tfrac{\xi}{r} \right) \begin{pmatrix}
				0 \\
				 0 \\
				  r \end{pmatrix} 
				  = \begin{pmatrix} \tfrac{\xi^2}{2r} \\
				   \xi \\
				    r - \tfrac{\xi^2}{2r} \end{pmatrix}
				  \qquad \text{for}  \; \; \xi \in \mathbb{R} \; .
	\]
The metric takes the form 
	$ 
		g_{\upharpoonright \Gamma^+ (W_1)} = {\rm d} \tau \otimes {\rm d} \tau -  {\rm e}^{\frac{2\tau}{r}} {\rm d} \xi \otimes {\rm d} \xi $. 

\subsection{Parabolas in $\Gamma^+ (W_1)$}
For $\tau$ fixed, \eqref{XR} parametrizes the \emph{horosphere}\index{horosphere} (which actually is a parabola in $\mathbb{R}^{1+2}$)
	\[ 
	 P_\tau \doteq \left\{ x(\tau, \xi) \mid \xi \in \mathbb{R} \right\} \subset dS \; . 
	\]
General horospheres result from taking the intersection of $dS$ with a plane whose 
normal\footnote{Note 
that the Lorentzian scalar product  $x \cdot p$, $x \in \mathbb{R}^{1+2}$, $p \in \partial V^+$, equals the Euclidean 
scalar product of $x$ with $Pp \in \partial V^+$, with $P$ the space-reflection (see the final paragraph in 
Section~\ref{isometrygroup}). The plane defined by $x \cdot p = 0$, $x \in \mathbb{R}^{1+2}$, $p \in \partial V^+$ fixed, contains
the point $x=p$.} vector~$p$ is light like, \emph{i.e.}, $p  \cdot p = 0$. 
In particular, the horosphere $P_\tau$  
is given by    
	\begin{equation} 
	\label{horosphere}
	P_\tau =  \left\{ x \in dS
	\mid x \cdot \left( \begin{smallmatrix}
					1\\
					0  \\
					- 1   
 			\end{smallmatrix} \right) = r {\rm e}^{\frac{\tau}{r}} \right\} \; . 
	\end{equation}
Thus 
general horospheres are of the form $R_0(\alpha)P_\tau$ with $\alpha \in [0, 2\pi)$ and $ \tau \in \mathbb{R}$.

\subsection{Horospheric distances}
The proper time-difference---given by \eqref{dlength}---of the 
points $\Lambda_1 \bigl( \tfrac{\tau_1}{r} \bigr) o$ and $\Lambda_1 \bigl(\tfrac{\tau_2}{r} \bigr) o$ on the geodesic passing through the 
origin $o = (0, 0, r )$  is\footnote{Simply recall that $\cosh (x - y) = \cosh x \cosh y - \sinh x \sinh y$.} 
	\[
	d \big(\Lambda_1 \bigl( \tfrac{\tau_1}{r} \bigr) o, \Lambda_1 \bigl( \tfrac{\tau_2}{r} \bigr) o \bigr)
	= r \; {\rm arcosh} \, \left( - \begin{pmatrix}
						\sinh \tfrac{\tau_1}{r}  \\
						0  \\
						\cosh \tfrac{\tau_1}{r}   
					\end{pmatrix} \cdot 
					\begin{pmatrix}
						\sinh \tfrac{\tau_2}{r} \\
						0  \\
						\cosh \tfrac{\tau_2}{r}   
					\end{pmatrix} \right) =    | \tau_1 - \tau_2 | \; . 
	\]
In fact, $| \tau_1 - \tau_2 |$ is the minimal distance of \emph{any} two time-like points on the horospheres 
$P_{\tau_1}$ and $P_{\tau_2}$, respectively: if 
	\begin{equation} 
		\label{a}
		x  = \bigl( r \sinh \tfrac{\tau_2}{r} + \tfrac{\xi^2}{2r} {\rm e}^{\tfrac{\tau_2}{r}} , \xi {\rm e}^{\tfrac{t_2}{r}} , 
		r \cosh \tfrac{\tau_2}{r} - \tfrac{\xi^2}{2r} {\rm e}^{\tfrac{\tau_2}{r}} \bigr)
	\end{equation}
is a point in $P_{\tau_2}$, then the minimal distance to time-like points in the horosphere~$P_{\tau_1}$, 
called the \emph{horospheric distance},  is given by\footnote{To show the second identity, one may use the invariance the Minkowski scalar 
product, \emph{i.e.}, $D(q)x \cdot D(q') y= D(q-q')x \cdot y $ for all $q, q' \in \mathbb{R}$ and $x, y \in \mathbb{R}^{1+2}$.} 
	\begin{align}
	\label{distance-to-horosphere}
	d  \bigl(x, P_{\tau_1} \bigr) & \doteq r \; {\rm arcosh} \bigl( \, \min_{y \in P_{\tau_1} } - \tfrac{x \cdot y}{r^2} \bigr)  \nonumber \\
	& 
	= |Ê\tau_1 - \tau_2| 
	= r \ln | \, \tfrac{x}{r} \cdot  p \left( \tfrac{\tau_1}{r} \right) |  
	\;  ,  
	\end{align}
with $			p(t) \doteq 
				 \left( \begin{smallmatrix}
					{\rm e}^{-t}\\
					0  \\
					- {\rm e}^{-t}   
 			\end{smallmatrix} \right) = 
			\Lambda_1  (t) 
			 \left( \begin{smallmatrix}
					1\\
					0  \\
					- 1   
 			\end{smallmatrix} \right)$. 
Note that $P_\tau =  \left\{ x \in dS \mid \tfrac{x}{r} \cdot  p \left( \tfrac{\tau}{r} \right) = 1 \right\} $.

\section{The complex Lorentz group}

The \emph{complex  de Sitter group}\index{complex  de Sitter group} is defined as
the group 
	\[ 
		O_{\mathbb{C}} (1,2)  
		\doteq 	\bigl\{ \Lambda \in M_3 (\mathbb{C}) 
				\mid \Lambda \,  \mathbb{g} \, \Lambda^T =  \mathbb{g} 
			\bigl\}   \; . 
	\]
The elements in $ M_3 (\mathbb{C})$ are $3 \times 3$ matrices with complex  entries and $\mathbb{g}$ is the metric 
on Minkowski space $\mathbb{R}^{1+2}$ given in \eqref{metrik}.
The group $O_{\mathbb{C}} (1,2) $ has two connected components (distinguished by the sign of $\det \Lambda$, which 
takes the values  $\det \Lambda= \pm 1$). Following standard terminology, we set  
	\[
		SO_{\mathbb{C}} (1,2) 
		\doteq	\bigl\{ \Lambda \in M_3 (\mathbb{C}) 
				\mid \Lambda \,  \mathbb{g} \, \Lambda^T =  \mathbb{g} \; , \; \det \Lambda = 1  
			\bigl\} \; .  
	\] 
Note that $SO_{\mathbb{C}} (1,2)$ is isomorphic to $SO_{\mathbb{C}} (3)$; the isomorphism from $SO_{\mathbb{C}} (1,2)$ 
to $SO_{\mathbb{C}} (3)$ is given by the map 
	\[
		\Lambda \mapsto \begin{pmatrix} - i & 0 & 0 \\
		0 & 1 & 0 \\
		0 & 0 & 1
		\end{pmatrix} \Lambda \begin{pmatrix} i & 0 & 0 \\
		0 & 1 & 0 \\
		0 & 0 & 1
		\end{pmatrix} .
	\]

\subsection{Analytic continuation}
In $O_{\mathbb{C}}  (1,2)$ the reflections 
		\[
		P_1 T \doteq \begin{pmatrix}
								-1 &  0 &0 \\
								0 &  1 & 0  \\
								0  & 0 & -1   
 						\end{pmatrix},  \; 
		P_2 T \doteq \begin{pmatrix}
								-1 &  0 &0 \\
								0 & -1 & 0  \\
								0  & 0 & 1   
						\end{pmatrix},  \; 
							P\doteq  \begin{pmatrix}
								1 &  0 &0 \\
								0 &  -1 & 0  \\
								0  & 0 & -1   
						\end{pmatrix}  , 
		\]
are topologically connected to the identity $\mathbb{1}$. In fact, the matrix-valued 
function $t\mapsto \Lambda_{1}(t)$  extends to an entire analytic function 
		\begin{equation} 
		\label{eqBooW} 
			\Lambda_{1}(t+i\theta) 
				= \Lambda_{1}(t) 
					\left[ \begin{pmatrix}
								\cos \theta &  0&0  \\
								0&  1 &0  \\
								0 & 0 & \cos \theta  
 						\end{pmatrix}
					+ i  \begin{pmatrix}
								0 &  0&\sin \theta  \\
								0&  0 &0  \\
								\sin \theta & 0 & 0  
							\end{pmatrix} \right] \, .
		\end{equation} 
The first matrix in the square brackets 
continuously deforms the unit $\mathbb{1}$ to 
$P_1 T$, as~$\theta$ takes values starting at~$\theta=0$ and ending at $\theta =\pm \pi$. The 
second matrix in the square brackets projects the wedge~$W_1$ 
continuously into the $x_1=0$ section of the forward light cone, \emph{cf.}~\cite{H}. 

\section{The Cartan decomposition\index{Cartan decomposition} of $SO_0(1,2)$}

Since every point $\vec x$ of $H_m^+ \cong SO_0(1,2) / SO(2)$ has the form 
	\[
		\vec x = \begin{pmatrix}
		m \cosh \theta  \\
		- m \sinh \theta \sin \alpha   \\
		m \sinh \theta \cos \alpha \end{pmatrix} , 		
	\]
any element $g \in SO_0(1,2)$ can be represented in the form
	\begin{equation}
	\label{cartan}
		g = R_0 (\alpha) \Lambda_1 (t) R_0( \alpha') \; , \qquad \alpha, \alpha' \in [0, 2 \pi) \; , \quad t \in \mathbb{R} \; . 
	\end{equation} 
The corresponding decomposition 
	\[
		SO_0(1,2) = KAK
	\]
with $K= SO(2)$ and $A=SO(1,1)$ is called the \emph{Cartan decomposition}. Note that the decomposition \eqref{cartan} is not unique;
see, \emph{e.g.}, \cite[Chapter 9.1.5]{Vil}.

\section{The Iwasawa decomposition\index{Iwasawa decomposition} of $SO_0(1,2)$}

A brief inspections shows that every point $x \in H_m^+ $ can also be written
in the form
	\begin{align}
	\label{XR2}
	x ( t, q) & = 	
		 D (q) \Lambda_{1} (t)  \begin{pmatrix}
										m \\
										0  \\
										0   
									\end{pmatrix}
									=
									 m \begin{pmatrix}
									  \cosh t + \tfrac{q^2}{2} {\rm e}^{t} \\
									   q {\rm e}^{t} \\
									  \sinh t - \tfrac{q^2}{2} {\rm e}^{t}
									\end{pmatrix}   
	\end{align}	
Now, if $g \in SO_0(1,2)$, then $g \left( \begin{smallmatrix} m \\ 0 \\ 0 \end{smallmatrix} \right)$ is of 
the form \eqref{XR2} for some \emph{unique} $t$ and $q$. It follows that this point can be carried back 
to the point $(m , 0,  0)$ by the action of $\Lambda_1(-t) D(-q)$, \emph{i.e.}, 
	\[
		\Lambda_1(-t) D(-q) \, g \begin{pmatrix} m \\ 0 \\ 0 \end{pmatrix} = \begin{pmatrix} m \\ 0 \\ 0 \end{pmatrix}
	\]
But 
the stabiliser of the point $\left( \begin{smallmatrix} m \\ 0 \\ 0 \end{smallmatrix} \right)$ is the group $K \cong SO(2)$. 
Thus there exists some $\alpha \in [0, 2 \pi)$ such that $R_0 (- \alpha) \Lambda_1(-t) D(-q) \, g = \mathbb{1}$. 
Thus we arrive at the so-called \emph{Iwasawa  decomposition}\footnote{Consider a non-compact semi-simple Lie group $G$.    
If one chooses a maximal compact sub\-group $K$, 
and a suitable abelian subgroup $A$, then there exits a nilpotent sub\-group $N$, 
normalised\footnote{$A$ normalises $N$, if 
$a n a^{-1} \in N$ for all $a \in A$ and all $n\in N$; see also \eqref{scalingH}.} by $A$, 
such that any group element $g \in G$ can be written \emph{uniquely} as $kan$  with $k \in K$, $a \in A$ 
and $n \in N$.} \cite{Iwa}:

\begin{lemma}
\label{iwa}
In case $G$ is the \emph{two-fold covering group}\index{two-fold covering group}\footnote{The two-fold covering group of 
$SO_0(1,2)$ is $SU(1,1)$.} 
of $SO_0(1,2)$, any element $g \in G$ can be written as 
	\begin{align*} 
		\label{Iwaw}
		g 
			& = R_0(2\alpha) P^k\Lambda_{1}(t) D(q)
			\\
			& =
				\begin{pmatrix}
					1 &  0 &0 \\
					0 &  \cos 2 \alpha & - \sin 2 \alpha  \\
					0  & \sin 2 \alpha & \cos 2 \alpha   
				\end{pmatrix}  
				\begin{pmatrix}
					1 &  0 &0 \\
					0 &  (-1)^k & 0  \\
					0  & 0 & (-1)^k   
				\end{pmatrix}
				\begin{pmatrix}
					\cosh t &  0 &\sinh t \\
					0 &  1&0  \\
					\sinh t & 0 & \cosh t   
 					\end{pmatrix} 
					\\
					& \qquad \qquad   \times 
			\begin{pmatrix}
					1+ \frac{q^2}{2} &  q &  \frac{q^2}{2}\\
					q &  1& q  \\
					- \frac{q^2}{2} & -q  & 1- \frac{q^2}{2}   
 			\end{pmatrix}, 	
			\quad \alpha \in [0,   \pi), \; t,q \in \mathbb{R}, \; k\in \{ 0, 1\} \; . 
	\end{align*} 
\end{lemma}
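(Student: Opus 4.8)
The plan is to leverage the Iwasawa decomposition of $SL(2,\mathbb{R})$ (or rather its conjugate $SU(1,1)$, which is isomorphic to $SL(2,\mathbb{R})$) and transport it to $G$ via the covering homomorphism $\pi\colon G \to SO_0(1,2)$. First I would recall that the standard Iwasawa decomposition $G = KAN$ holds for any connected semisimple Lie group with finite centre, with $K$ a maximal compact subgroup, $A$ a maximal $\mathbb{R}$-split torus, and $N$ the corresponding unipotent radical; the decomposition map $(k,a,n) \mapsto kan$ is a diffeomorphism $K \times A \times N \to G$. For $G$ the two-fold cover of $SO_0(1,2)$, the maximal compact $K$ is the connected two-fold cover of $SO(2)$, hence again a circle, which I parametrise as $\alpha \mapsto R_0(2\alpha)$, $\alpha \in [0,\pi)$ (the factor $2$ and the restricted range encode precisely that we have passed to the double cover, so that the parameter $2\alpha$ runs once around $SO(2) \subset SO_0(1,2)$ as $\alpha$ runs over $[0,\pi)$). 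For $A$ I take the boost subgroup $\{\Lambda_1(t) \mid t \in \mathbb{R}\}$ and for $N$ the horospherical subgroup $\{D(q) \mid q \in \mathbb{R}\}$ of Lemma~\ref{stabil}; equation \eqref{scalingH}, $\Lambda_1(t) D(q) \Lambda_1(-t) = D(\mathrm{e}^t q)$, shows that $A$ normalises $N$, which is the algebraic input the Iwasawa decomposition needs.

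The concrete derivation I would give is the orbit argument already sketched in the text preceding the lemma statement: fix the base point $(m,0,0) \in H_m^+$, whose $SO_0(1,2)$-stabiliser is $K_0$-rotations, i.e. $SO(2)$. Given $g \in SO_0(1,2)$, the point $g\,(m,0,0)^T \in H_m^+$ can be written uniquely in the form \eqref{XR2}, namely as $D(q)\Lambda_1(t)(m,0,0)^T$ for uniquely determined $t,q \in \mathbb{R}$ — uniqueness follows by reading off $q\mathrm{e}^t$ from the second coordinate and $\cosh t + \tfrac{q^2}{2}\mathrm{e}^t \pm (\sinh t - \tfrac{q^2}{2}\mathrm{e}^t)$ from suitable combinations of the first and third, which determine $\mathrm{e}^{t}$ and $\mathrm{e}^{-t}$ and hence $t$, then $q$. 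Therefore $\Lambda_1(-t)D(-q)\,g$ fixes $(m,0,0)^T$, so it lies in the stabiliser $SO(2)$, i.e. $\Lambda_1(-t)D(-q)\,g = R_0(\beta)$ for some $\beta$; rearranging gives $g = R_0(-\beta)D(q)\Lambda_1(t)^{-1}\cdots$ — here I must be careful to conjugate $D$ past $\Lambda_1$ using \eqref{scalingH} so as to land in the ordered form $R_0 \cdot \Lambda_1 \cdot D$ rather than $R_0 \cdot D \cdot \Lambda_1$; absorbing the resulting rescaling of $q$ (which is a bijection of $\mathbb{R}$) changes nothing. This yields the decomposition at the level of $SO_0(1,2)$ with a rotation parameter $\beta \in [0,2\pi)$.

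It remains to lift to the double cover $G$. Since $K \times A \times N \to G$ is a diffeomorphism and the covering map $\pi$ restricts to the connected double cover $K \to SO(2)$ on the compact factor while being an isomorphism on the simply-connected factors $A$ and $N$, the decomposition of $g \in G$ is obtained by decomposing $\pi(g)$ as above and then choosing the unique lift of the $SO(2)$-part to $K$ that is compatible with $g$. Concretely, the $SO(2)$-rotation $R_0(\beta)$ has two preimages in $K$, differing by the nontrivial deck element $P^2 = \mathbb{1}$ — wait, more precisely the nontrivial central element of $G$, which in the matrix realisation of the lemma is recorded by the factor $P^k$ with $k \in \{0,1\}$, where $P = \mathrm{diag}(1,-1,-1)$ represents (the image of) this central element. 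Writing the lifted rotation as $R_0(2\alpha)P^k$ with $\alpha \in [0,\pi)$ and $k \in \{0,1\}$ exactly parametrises $K$, and substituting gives the claimed formula $g = R_0(2\alpha)P^k \Lambda_1(t) D(q)$ with the stated ranges. Writing out the three (resp. four) matrices is then a direct multiplication, matching the display in the statement.

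The main obstacle — really the only non-routine point — is the bookkeeping in the lift: making precise that $K$, the maximal compact of the two-fold cover, is correctly parametrised by $(\alpha,k) \in [0,\pi) \times \{0,1\}$ via $R_0(2\alpha)P^k$, i.e. that $P$ genuinely represents the nontrivial deck transformation in this $3\times 3$ realisation (note $\pi(P) = \mathrm{diag}(1,-1,-1) \in SO(3)$-part is the $\pi$-rotation, and its square is the identity, consistent with $P$ having order $2$ in $G$ and lying in the centre). Once the parametrisation of $K$ is pinned down, everything else is the orbit computation plus one application of \eqref{scalingH}, both of which are explicit and elementary.
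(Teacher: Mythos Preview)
Your orbit/stabiliser argument on $H_m^+$ is exactly the paper's: the text immediately preceding the lemma is the whole proof the paper offers, and you have reproduced it. Two points deserve correction, one minor and one more substantive.

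\emph{Ordering.} The orbit argument, both as you and the paper run it, yields $\Lambda_1(-t)D(-q)\,g \in SO(2)$, i.e.\ $g = D(q)\Lambda_1(t)R_0(\beta)$ --- the $NAK$ form, not $KAN$. Your remark about ``conjugating $D$ past $\Lambda_1$ using \eqref{scalingH}'' does not fix this: swapping $D$ and $\Lambda_1$ keeps the $K$-factor on the right. The clean remedy is to apply the orbit argument to $g^{-1}$, giving $g^{-1} = D(q')\Lambda_1(t')R_0(\beta')$, hence $g = R_0(-\beta')\Lambda_1(-t')D(-q')$, which after relabelling is the desired $KAN$.

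\emph{The lift and the role of $P$.} Your claim that $P = \mathrm{diag}(1,-1,-1)$ ``represents the nontrivial deck transformation'' is incorrect and is the one genuine gap. The kernel of the covering $G \cong SU(1,1) \to SO_0(1,2)$ is the centre $\{\pm I\}$ of $SU(1,1)$, and $-I$ projects to the \emph{identity} in $SO_0(1,2)$, not to $P = R_0(\pi)$. (Indeed $P$ is not central in $SO_0(1,2)$: one computes $P\Lambda_1(t)P = \Lambda_1(-t)$.) The $3\times 3$ matrices displayed in the lemma all live in $SO_0(1,2)$; the quadruple $(\alpha,k,t,q) \in [0,\pi)\times\{0,1\}\times\mathbb{R}^2$ is what parametrises $G$ bijectively, via the Iwasawa decomposition of $SU(1,1)$ with its maximal compact circle parametrised by $\theta \in [0,2\pi)$ written as $\theta = \alpha + k\pi$. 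Under the covering map the image of the $K$-factor is $R_0(2\theta) = R_0(2\alpha + k\pi) = R_0(2\alpha)P^k$, which is visibly a 2-to-1 map $[0,\pi)\times\{0,1\} \to SO(2)$, as it must be. So $P^k$ records which half of the covering circle $\theta$ lies in, not the deck transformation itself. Once you reformulate the bookkeeping this way, your last paragraph goes through.
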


The resulting decomposition, 
$G= KAN $, provides  
\begin{itemize}
\item[$i.)$] a maximal compact subgroup $K$ (the two-fold covering group of $SO(2)$), consisting of the rotations 
$K/M \cong SO(2)$ and the group $M\cong \mathbb{Z}_2$ generated 
by the reflection $P$:
	\[
		M= \left\{
				\begin{pmatrix}
					1 &  0 &0 \\
					0 &  1 & 0  \\
					0  & 0 & 1   
				\end{pmatrix}  ,
				\begin{pmatrix}
					1 &  0 &0 \\
					0 &  -1 & 0  \\
					0  & 0 & -1   
				\end{pmatrix}  	\right\} \cong \mathbb{Z}_2 \; ; 
	\] 
\item[$ii.)$] a Cartan maximal abelian subgroup $A\cong (\mathbb{R}, +)$, namely the boosts 
$\{ \Lambda_1(t) \mid  t \in \mathbb{R} \}$; 
\item[$iii.)$] a nilpotent\footnote{A nilpotent group is one that has a central series of finite length.
Any unipotent group is a nilpotent group, though the diagonal matrices are only nilpotent in $GL (n)$.} 
group $N \cong (\mathbb{R}, +)$, namely the horospheric translations $\{ D(q) \mid  q \in \mathbb{R} \}$.
\end{itemize}

\begin{remark}
The subgroup $M$ is the \emph{centralizer}\index{centralizer} of $A$ in $K$, \emph{i.e.}, 
	\[
		M \cong \{ k \in K \mid k a= a k \; \; \forall a \in A\} \; .
	\]
The group $AN$ is a solvable subgroup and $B= MAN$ 
is the \emph{minimal parabolic} subgroup\footnote{We recall that a {\em normal series} of a group~$G$ 
is a finite sequence $A_1, \ldots, A_N$ of subgroups such that 
$\mathbb{1} \trianglelefteq A_1 \trianglelefteq  \ldots \trianglelefteq A_N \trianglelefteq G$. A {\em normal factor} of $G$ is a quotient 
group $A_{k+1}/ A_k $ for some index $k < N$. The group $G$ is  a {\em solvable group} if all normal 
factors are abelian. Any maximal solvable subgroup $B \subset G$ is called a {\em Borel subgroup}.   
(A maximal subgroup $B$ of a group $G$ is a proper subgroup, such that no proper subgroup $K$ 
contains $B$ strictly.) Given the Iwasawa decomposition, a {\em parabolic subgroup} of $G$ is a closed subgroup containing some conjugate of $MAN$; the conjugates of $MAN $ are called {\em minimal 
parabolic subgroups} (see \cite[Chapter V.5]{Knapp}). 
In the case of the group $SL(2,\mathbb{R})$, there is up to conjugacy only one proper parabolic subgroup, 
the \emph{Borel subgroup} of the upper-triangular matrices of determinant $1$. 
The Iwasawa decomposition shows that it is enough to conjugate to $K$.} of~$G$. 
\end{remark}

\section{The Hannabuss decomposition of $SO_0(1,2)$}

A decomposition, which is closely related to the Iwasawa decomposition, was 
discovered\footnote{Its relevance in the present context was first emphasised by Hannabuss \cite{Hanna}.} by Takahashi \cite{T}: 

\begin{lemma}[Hannabus \cite{Hanna}]
Almost every element $g \in SO_0(1,2)$ can be written uniquely in the form of a product
	\begin{align*} 
		\label{Iwaw}
		g 
			& = \Lambda_{2}(s) P^k \Lambda_{1}(t) D(q)
			\\
			& \quad \\
			& =
				\begin{pmatrix}
 								\cosh s &  \sinh s &0 \\
								\sinh s &   \cosh s &0  \\
								0 & 0 & 1  
				\end{pmatrix} 
				 \begin{pmatrix}
						1 &  0 &0 \\
						0 &  (-1)^k & 0  \\
						0  & 0 & (-1)^k   
					\end{pmatrix} 
					\\
			& \qquad \qquad  \qquad \qquad \qquad  \times  
			\begin{pmatrix}
					\cosh t &  0 &\sinh t \\
					0 &  1&0  \\
					\sinh t & 0 & \cosh t   
 					\end{pmatrix}
			\begin{pmatrix}
					1+ \frac{q^2}{2} &  q & - \frac{q^2}{2}\\
					q &  1& -q  \\
					\frac{q^2}{2} & r & 1- \frac{q^2}{2}   
 			\end{pmatrix}, 	
	\end{align*} 
with  $s, t, q \in \mathbb{R}$ and $k=\{0,1\}$, \emph{i.e.}, almost every element $g \in SO_0(1, 2)$
can be decomposed into a product, which consists of a Lorentz transformation $s \mapsto \Lambda_2(s)$, 
possibly a reflection $P^k$,
a time translation $t \mapsto \Lambda_1(t)$, and a spatial translation $q \mapsto D(q)$. 
\end{lemma}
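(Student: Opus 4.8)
\emph{Proof strategy.} The plan is to run the same argument as for the Iwasawa decomposition (Lemma~\ref{iwa}), but to realise $SO_0(1,2)$ through its action on the forward light cone $\partial V^+$ instead of on a mass hyperboloid, so that the stabiliser producing the residual factor is the horospheric one-parameter group $D(\cdot)$ rather than $SO(2)$. Fix the base point $\mathfrak p \doteq (1,0,1) \in \partial V^+$. A direct computation shows that the rightmost factor in the statement fixes $\mathfrak p$; moreover, since $P\,(1,0,-1)=\mathfrak p$ with $P\in SO_0(1,2)$, conjugating Lemma~\ref{stabil} by $P$ shows that the $SO_0(1,2)$-stabiliser of $\mathfrak p$ is \emph{exactly} that one-parameter subgroup (it is a nontrivial one-parameter subgroup fixing $\mathfrak p$, and the stabiliser is itself a one-parameter group). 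Finally, $SO_0(1,2)$ acts transitively on $\partial V^+$ — this is the formula $\partial V^+ = \{R_0(\alpha)\Lambda_1(t)(1,0,-1)\mid t\in\mathbb{R},\,\alpha\in[0,2\pi)\}$ recorded above — so the orbit map $g\mapsto g\mathfrak p$ is a surjection of $SO_0(1,2)$ onto $\partial V^+$.

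\emph{Sweeping out the cone.} Using $\Lambda_1(t)\mathfrak p = {\rm e}^t\mathfrak p$, $P^k\mathfrak p = (1,0,(-1)^k)$, and the matrix of $\Lambda_2(s)$, one computes
\[
	\Lambda_2(s)\,P^k\,\Lambda_1(t)\,\mathfrak p \;=\; {\rm e}^{t}\,\bigl(\cosh s,\ \sinh s,\ (-1)^k\bigr)\;\in\;\partial V^+\;.
\]
Conversely, let $q=(q_0,q_1,q_2)\in\partial V^+$ with $q_2\neq 0$. Matching the last coordinate, ${\rm e}^t(-1)^k=q_2$ determines $k\in\{0,1\}$ from the sign of $q_2$ and then $t=\ln|q_2|$, uniquely; matching the first two, ${\rm e}^t\cosh s=q_0$ and ${\rm e}^t\sinh s=q_1$ force $\tanh s = q_1/q_0$, which has a unique solution $s\in\mathbb{R}$ because $q_0^2=q_1^2+q_2^2>q_1^2$ gives $|q_1/q_0|<1$ (and the two equations are then consistent, $\cosh s = q_0/|q_2|\ge 1$). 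Hence $(s,k,t)\mapsto \Lambda_2(s)P^k\Lambda_1(t)\mathfrak p$ is a bijection of $\mathbb{R}\times\{0,1\}\times\mathbb{R}$ onto $\{q\in\partial V^+\mid q_2\neq 0\}$.

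\emph{Conclusion.} Suppose $(g\mathfrak p)_2\neq 0$ (coordinates of $\mathbb{R}^{1+2}$ being indexed $0,1,2$, so $(\,\cdot\,)_2$ is the $x_2$-component). Choose the unique triple $(s,k,t)$ with $\Lambda_2(s)P^k\Lambda_1(t)\mathfrak p=g\mathfrak p$; then $h\doteq\bigl(\Lambda_2(s)P^k\Lambda_1(t)\bigr)^{-1}g$ fixes $\mathfrak p$, hence $h=D(q)$ for a unique $q\in\mathbb{R}$, and rearranging yields $g=\Lambda_2(s)P^k\Lambda_1(t)D(q)$. For uniqueness, any factorisation $g=\Lambda_2(s')P^{k'}\Lambda_1(t')D(q')$ gives $\Lambda_2(s')P^{k'}\Lambda_1(t')\mathfrak p = g\mathfrak p$, so the bijection above forces $(s',k',t')=(s,k,t)$, and cancellation then gives $q'=q$. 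Finally, the exceptional set $\{g\in SO_0(1,2)\mid (g\mathfrak p)_2=0\}$ is the zero locus inside the connected algebraic group $SO_0(1,2)$ of the linear function $g\mapsto (g\mathfrak p)_2 = g_{20}+g_{22}$, which is not identically zero (it equals $1$ at $g=\mathbb{1}$); it is therefore a proper, closed, nowhere dense subset of Haar measure zero. Thus almost every $g$ admits the asserted, and unique, factorisation.

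\emph{Main obstacle.} The computations are routine; the two points needing care are (i) matching sign conventions, i.e.\ keeping track of which $SO_0(1,2)$-conjugate of $\{D(q)\}$ is the rightmost factor (this is why the base point is $(1,0,1)$ and not the $(1,0,-1)$ of Lemma~\ref{stabil}) and of the reflection $P$; and (ii) isolating the exceptional set precisely while establishing genuine uniqueness, which rests on the strict inequality $q_0>|q_1|$ valid on $\partial V^+$ away from $q_2=0$. Neither presents a real difficulty.
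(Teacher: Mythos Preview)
Your proof is correct and takes a genuinely different route from the paper's. The paper proceeds via the Iwasawa decomposition $g = R_0(\alpha)\Lambda_1(t'')D(q'')$: it shows that, whenever $\cos\alpha\neq 0$, the rotation $R_0(\alpha)$ itself factorises as $\Lambda_2(s)P^k D(-q')\Lambda_1(-t')$ (found by working out the Iwasawa decomposition of $\Lambda_2(s)P^k$ and inverting), and then uses the commutation rule $\Lambda_1(t)D(q)\Lambda_1(-t)=D(e^{t}q)$ to merge the two $\Lambda_1$- and the two $D$-factors into one each. Your argument is instead a direct orbit--stabiliser computation on the forward light cone, exactly paralleling how the paper proves the Iwasawa decomposition itself (there via the action on $H_m^+$, with stabiliser $K$; here on $\partial V^+$, with stabiliser $N$). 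The paper's route yields an explicit dictionary between the Iwasawa and Hannabuss parameters and identifies the exceptional set as $\cos\alpha=0$ in Iwasawa coordinates; yours is independent of the Iwasawa lemma, geometrically transparent, and delivers uniqueness directly. The two exceptional sets agree: with the standard $D(q)$ fixing $(1,0,-1)$ one computes $g(1,0,-1)=e^{-t''}(1,\sin\alpha,-\cos\alpha)$, so the $x_2$-component vanishes precisely when $\cos\alpha=0$.
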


\begin{proof} Let $g \in G$ be given in its Iwasawa decomposition, \emph{i.e.}, 
	\[
		g=  R_0( \alpha)  \Lambda_1(t'') D(q'') \; , \qquad \alpha \in [0, 2 \pi), \quad t,q \in \mathbb{R}\; . 
	\]
We will show that, unless $\alpha =  \tfrac{\pi}{2}$  or  
$ \tfrac{3\pi}{2}$, $R_0( \alpha)$ can be written in  the form 
	\begin{equation}
	\label{r0H}
		R_0( \alpha) = \Lambda_2(s) P^k D(-q') \Lambda_1(-t')  \; , \qquad s, t', q' \in \mathbb{R}\; , \quad k = 0,1 \; . 
	\end{equation}
Taking \eqref{scalingH} into account,   this will imply that  
	\[
		g = \Lambda_2(s) P^k  \Lambda_1(\underbrace{t''-t'}_{t}) D \bigl( \underbrace{q'' - {\rm e}^{-(t''-t')} q' }_{q} \bigr) \; . 
	\]
Thus it remains to establish \eqref{r0H}. Multiplying 	\eqref{r0H} with $\Lambda_1(t') D(q')$ from the right yields 
	\[
		\Lambda_2(s) P^k = R_0( \alpha)  \Lambda_1(t') D(q') \; , \qquad s \in \mathbb{R}\; , \quad k = 0,1 \; . 
	\]
This is the Iwasawa decomposition of $\Lambda_2(s)P^k$, $k=0, 1$, which  is given by choosing 
	\[
		\begin{matrix}	& \cosh s = (-1)^k \cos^{-1}\alpha \; , \qquad  & {\rm e}^{t'} = \tfrac{1}{| \cos \alpha | }  \; , \\ 
				& \sinh s  = (- 1)^{k+1}\tan \alpha \:,  \qquad &
				q' = (-1)^{k+1} \sin \alpha \; . 
		\end{matrix}
	\]
In fact, 
unless $\cos \alpha = 0$, 
	\begin{align*} 
		&\begin{pmatrix}
 								| \cos \alpha |^{-1}& - \frac{\sin \alpha}{| \cos \alpha|}  &0 \\
								-\frac{\sin \alpha}{| \cos \alpha|}  &  | \cos \alpha |^{-1}  &0  \\
								0 & 0 &   1    
		\end{pmatrix}  \begin{pmatrix}
 								1 &  0  &0 \\
								 0  &  (-1)^k  &0  \\
								0 & 0 &   (-1)^k     
		\end{pmatrix} \\
\\
		& \qquad =						\begin{pmatrix}
					1 &  0 &0 \\
					0 &  \cos \alpha & - \sin \alpha  \\
					0  & \sin \alpha & \cos \alpha   
				\end{pmatrix}  
\begin{pmatrix}
					 \tfrac{ | \cos \alpha |^{-1}   + | \cos \alpha |}{2}  &  0 &  \tfrac{ | \cos \alpha |^{-1}   - | \cos \alpha | }{2} \\
					0 &  1&0  \\
					 \tfrac{| \cos \alpha |^{-1}   - | \cos \alpha |}{2} & 0 &  \tfrac{| \cos \alpha |^{-1}   + | \cos \alpha |}{2}   
 					\end{pmatrix} 
		\\
		&
		\qquad \qquad \times 	\begin{pmatrix}
					1+ \frac{\sin^2 \alpha }{2} &  (-1)^{k+1} \sin \alpha  & - \frac{\sin^2 \alpha }{2}\\
					(-1)^{k+1} \sin \alpha  &  1&  (-1)^{k}\sin \alpha   \\
					\frac{\sin^2 \alpha}{2} & (-1)^{k+1} \sin \alpha & 1- \frac{\sin^2 \alpha }{2}   
 			\end{pmatrix}. 	
	\end{align*} 
with 
	\[
		k = \begin{cases} 0 & \text{if} \quad \cos \alpha >0 \; , \\
		1 & \text{if} \quad \cos \alpha <0 \; .  
		\end{cases} 
	\]
The exceptional group elements, which can not be represented in this form,  
contain the rotations 
$R_0 (\pm \frac{\pi}{2})$ in their Iwasawa decomposition.  
\end{proof}

The resulting 
decomposition of $G$ is of the form
	\[
	 	G= K'AN \; , \qquad K' = 
		\left\{ \Lambda_2 (s) ,  \Lambda_2 (s)P \mid s \in \mathbb{R} \right\} \; . 
	\]
The spatial reflection $P$ is needed to account for elements whose Iwasawa decomposition contains a rotation $R_0(\alpha)$,
with $\frac{\pi}{2} < \alpha< \frac{3 \pi}{2}$. 

\section{Homogeneous spaces, cosets and orbits}
\label{circle-mass-shell}

Consider a closed subgroup $H$ of a topological group $G$. 
Let $\mathbb{\Pi} \colon G \to G/H$ denote the canonical mapping
defined by 
	\[
		\mathbb{\Pi} (g) = gH \; . 
	\]
We equip $G/H$ with the \emph{quotient topology}\index{quotient topology}, \emph{i.e.}, a set $O \subset G/H$ is 
open if $\mathbb{\Pi}^{-1}(O) \subset G$ is open. By construction, $G$ acts \emph{transitively}\index{transitive} on $G/H$: 
	\[
		g \mathbb{\Pi} (g') = \mathbb{\Pi} (g g') \;  . 
	\]
We note that (locally) there is a continuous section $\Xi \colon G/H \to G$, 
which satisfies $\mathbb{\Pi} \circ \Xi = id $. 

\begin{lemma} 
\label{surfaceG/H}
Let $G \cong SU(1,1)$ be the two-fold covering group of $SO_0(1,2)$. 
Furthermore, let $H \subset   G  $ be the stabiliser of an arbitrary point $x \in \mathbb{R}^{1+2}$. 

Then there exists a bijective map $\mathbb{\Gamma} \colon G/H \to X$,
	\[
		g H \mapsto g x \; , \qquad g \in G \; , 
	\]
from the homogeneous space $G / H = \{ gH \mid g \in G \}$ to the orbit $X=\{ gx \in \mathbb{R}^{1+2} \mid g \in G \}$
such that 
	\begin{equation}
	\label{groupaction}
		 \mathbb{\Gamma} (gg'H) = g \mathbb{\Gamma}(g'H) \qquad \forall g, g' \in G \; ; 
	\end{equation}
\emph{i.e.}, $g (g' x) = (g \circ g') x$ for all $g, g' \in G$.
\end{lemma}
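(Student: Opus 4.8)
The plan is to verify directly that the assignment $gH \mapsto gx$ defines a bijection intertwining the two $G$-actions; this is nothing but the orbit–stabiliser correspondence, adapted to the present topological setting. First I would check that $\mathbb{\Gamma}$ is well defined on cosets: if $gH = g'H$, then $g^{-1}g' \in H$, and since $H$ is by hypothesis the stabiliser of $x$, we obtain $g^{-1}g'\, x = x$, hence $gx = g'x$. Thus the value $gx$ depends only on the coset $gH$, and $\mathbb{\Gamma}$ is a well-defined map $G/H \to \mathbb{R}^{1+2}$ with image contained in $X$.

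Next I would establish surjectivity onto $X$, which is immediate from the definition of the orbit $X = \{ gx \mid g \in G\}$: every element of $X$ is of the form $gx$ for some $g \in G$, and hence equals $\mathbb{\Gamma}(gH)$. For injectivity, suppose $\mathbb{\Gamma}(gH) = \mathbb{\Gamma}(g'H)$, \emph{i.e.}, $gx = g'x$; applying $g^{-1}$ yields $g^{-1}g'\, x = x$, so $g^{-1}g'$ lies in the stabiliser $H$ of $x$, which is exactly the statement $gH = g'H$. Hence $\mathbb{\Gamma}$ is a bijection from $G/H$ onto $X$.

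Finally, the equivariance identity \eqref{groupaction} follows from associativity of the group action on $\mathbb{R}^{1+2}$: for all $g, g' \in G$ one has $\mathbb{\Gamma}(gg'H) = (gg')x = g(g'x) = g\,\mathbb{\Gamma}(g'H)$, which is precisely $g(g'x) = (g\circ g')x$.

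There is essentially no analytic obstacle here; the only point requiring a little care is to invoke the defining property of $H$ as the stabiliser of $x$ in \emph{both} directions --- once for well-definedness and once for injectivity. If one additionally wanted $\mathbb{\Gamma}$ to be a homeomorphism (which the statement does not demand), one would bring in the local continuous section $\Xi \colon G/H \to G$ noted above, together with a mild hypothesis such as second countability of $G$, to control the inverse map; but for the bijection and the equivariance \eqref{groupaction} claimed in the lemma, the elementary argument above suffices.
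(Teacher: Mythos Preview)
Your proof is correct and follows essentially the same orbit--stabiliser argument as the paper: well-definedness from $H$ fixing $x$, injectivity from the converse, and equivariance from associativity. The paper's version is slightly terser (it does not separate out surjectivity or write out the equivariance computation), but the content is identical.
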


\begin{proof}
One easily verifies that  $\mathbb{\Gamma}$ is well-defined: if $g_1 H = g_2 H$,  then $g_1 = g_2 h$ for some $h \in H$, and since 
the $H$ leaves the point $x$ invariant, the map is well-defined. On the other hand, if 
	\[
		g_1	x = g_2 	x \; , 
	\]
then  $g^{-1}_2 g_1 $ fixes $x$ and thus must be in $H$. 
This implies $g_1 H = g_2 H$. Thus the map $\mathbb{\Gamma} \colon G/H \to X$ is bijective.
By construction, it satisfies \eqref{groupaction}. 
\end{proof}

In the following, we concentrate on the cases where $H$ is the stabiliser (within $SO_0(1,2)$ of a point $x \in \mathbb{R}^{1+2}$. 

\subsection{The forward light-cone}

\label{surfaces} Let us first consider the case $x = \left( \begin{smallmatrix}
1 \\ 0 \\ -1 \end{smallmatrix} \right)$. According to Lemma \ref{stabil}, $x= D(q)x $ for all $q \in \mathbb{R}$. 
Since the group $\{ D(q) \mid q \in \mathbb{R} \}$ is nilpotent, it is usually denoted by the letter $N$. Clearly, 
the point $x$ is also invariant under the reflection $P$, which generates the subgroup $M$. 
Thus the stabiliser of $x$ in the two-fold covering group of $SO_0(1,2)$ is $H= MN$. Now
recall that the map 
	\[
		(\alpha, t) \mapsto R_0 (\alpha) \Lambda_{1}(t) \left( \begin{smallmatrix}
						1 \\
						0 \\
						- 1
					\end{smallmatrix} \right) = R_0 (\alpha) \left( \begin{smallmatrix}
						{\rm e}^{-t} \\
						0 \\
						- {\rm e}^{-t}
					\end{smallmatrix} \right) \in  \partial V^+ \; , \quad t \in \mathbb{R} \; , \; \; \alpha \in [0, 2 \pi) \; , 
	\]
provides coordinates for $\partial V^+ \setminus \{ (0,0,0) \}$. 
Note that $\Lambda_{1}(t) $ leaves the light ray connecting the origin $(0,0,0)$ and the 
point $(1, 0,-1)$ invariant.   
Using Lemma \ref{iwa},
the canonical mapping $\mathbb{\Pi} \colon G \to G/MN$ is given by
	\[
		g \mapsto R_0(\alpha) \Lambda_{1}(t) MN \; , \qquad \text{with} \quad g =  R_0(\alpha) P^k \Lambda_{1}(t) D(q) \; .  
	\]

\begin{lemma} 
\label{repV+}
The homogeneous space $G/MN \cong \{ gMN \mid g \in G \}$ can be 
naturally identified with $\partial V^+ \setminus \{(0, 0,0)\}$ by setting
	\[
		\mathbb{\Gamma} (g MN) \doteq    g 	\left( \begin{smallmatrix}
							1 \\
							0 \\
							-1
						\end{smallmatrix} \right) \; , \qquad g \in G \; .  
	\]
Moreover, $g (g' x) = (g \circ g') x$ for all $g, g' \in G$ and $x \in \partial V^+ \setminus \{(0, 0,0)\}$. 
\end{lemma}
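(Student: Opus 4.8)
The plan is to obtain the statement as a specialisation of Lemma~\ref{surfaceG/H} to the vector $x = \left( \begin{smallmatrix} 1 \\ 0 \\ -1 \end{smallmatrix} \right) \in \partial V^+$. That lemma already delivers a bijection $\mathbb{\Gamma} \colon G/H \to X$, $gH \mapsto gx$, satisfying the equivariance property~\eqref{groupaction}, where $H \subset G$ is the stabiliser of $x$ and $X = \{ gx \mid g \in G \}$ is the corresponding $G$-orbit. Thus the only work is to identify $H$ and $X$ explicitly: one has to check that $H = MN$ and that $X = \partial V^+ \setminus \{(0,0,0)\}$.

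For the stabiliser I would repeat the short argument from the paragraph preceding the lemma. By Lemma~\ref{stabil}$(i)$ every $D(q)$ fixes $x$, and $P$ fixes $x$ as well (it only reverses the signs of the vanishing $x_1$- and $x_2$-components), so $MN \subseteq H$. For the reverse inclusion I would write an arbitrary $g \in H$ in its Iwasawa form $g = R_0(2\alpha) P^k \Lambda_1(t) D(q)$ (Lemma~\ref{iwa}); using $D(q), P \in H$ this reduces $gx = x$ to $R_0(2\alpha)\Lambda_1(t) x = x$. Since $\Lambda_1(t) x = {\rm e}^{-t} x$ and $R_0(2\alpha)\, {\rm e}^{-t} x = {\rm e}^{-t}\bigl(1, \sin 2\alpha, - \cos 2\alpha\bigr)^{T}$, comparing first components gives $t = 0$, and then $\cos 2\alpha = 1$ forces $\alpha = 0$, so $g = P^k D(q) \in MN$. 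Hence $H = MN$.

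For the orbit I would observe that the $G$-orbit of $x$ coincides with its $SO_0(1,2)$-orbit, because the covering projection $G \to SO_0(1,2)$ is surjective. That $SO_0(1,2)$ acts transitively on $\partial V^+ \setminus \{(0,0,0)\}$ is exactly what the parametrisation $(\alpha, t) \mapsto R_0(\alpha)\Lambda_1(t) x$ displayed just above the lemma expresses (it also follows from the orbit lemma earlier in this chapter, restricted to the forward cone, since $SO_0(1,2)$ preserves the time orientation). Hence $X = \partial V^+ \setminus \{(0,0,0)\}$, and both the bijectivity of $\mathbb{\Gamma}$ and the relation $g(g'x) = (g \circ g')x$ are then read off directly from Lemma~\ref{surfaceG/H}.

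There is no real obstacle: the proof is bookkeeping on top of Lemmas~\ref{stabil}, \ref{iwa} and~\ref{surfaceG/H}. The only step that is not completely automatic is checking that $H$ is no larger than $MN$ --- that is, that no nontrivial boost $\Lambda_1(t)$ (which merely rescales the null ray through $x$) and no nontrivial rotation $R_0(\beta)$ fixes the vector $(1,0,-1)$ itself --- and this is precisely what the small computation in the stabiliser step settles.
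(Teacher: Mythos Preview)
Your proof follows exactly the route the paper takes: there is no separate proof environment for this lemma, and the preceding paragraph (stabiliser $=MN$, orbit $=\partial V^+\setminus\{0\}$ via the displayed chart) together with Lemma~\ref{surfaceG/H} is the intended argument; your Iwasawa check for $H\subseteq MN$ is a natural completion of that outline which the paper leaves implicit. One small correction: the parenthetical reason you give for $P\in H$ is off, since $x_2=-1$ does not vanish --- the correct reason, implicit in the paper's double-cover setup, is that the nontrivial element of $M\subset G$ lies in the kernel of the covering map $G\to SO_0(1,2)$ and therefore acts trivially on $\mathbb{R}^{1+2}$.
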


\subsection{The mass hyperboloid} 
Next consider the point $x = \left( \begin{smallmatrix}
m \\ 0 \\ 0 \end{smallmatrix} \right)$. 
Clearly, $R_0(\alpha) x = x$ for all $\alpha \in [0, 2 \pi)$. 
In other words, the stabiliser of $x$ is $K$. 

\begin{lemma} The coset space $SO_0(1,2)/K$ can be naturally identified with a 
\emph{two-fold covering} of $H_m^+ $ 
by setting
	\[
		\mathbb{\Xi} (g K) \doteq    g 	\left( \begin{smallmatrix}
							m \\
							0 \\
							0
						\end{smallmatrix} \right) \; , \qquad g \in SO_0(1,2) \; .  
	\]
\end{lemma}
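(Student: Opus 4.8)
The plan is to deduce the lemma from the abstract orbit--stabiliser statement of Lemma~\ref{surfaceG/H}, the only genuine work being the identification of the stabiliser of the base point $x=\left(\begin{smallmatrix}m\\0\\0\end{smallmatrix}\right)\in H_m^+$ and the comparison of $K$ with it.

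First I would compute $\mathrm{Stab}_{SO_0(1,2)}(x)$. If $g\in SO_0(1,2)$ satisfies $gx=x$, then the first column of the matrix $g$ is $(1,0,0)^{T}$; since $g$ preserves the form $\mathbb{g}$, has determinant $1$ and is orthochronous, the lower-right $2\times2$ block is then forced to be a proper rotation, so $g=R_0(\alpha)$ for a unique $\alpha\in[0,2\pi)$. Equivalently, one may read this off the Cartan decomposition \eqref{cartan}: in $g=R_0(\alpha)\Lambda_1(t)R_0(\alpha')$ the rightmost factor already fixes $x$, so $gx=x$ forces $\Lambda_1(t)x=x$, i.e.\ $t=0$. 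Hence $\mathrm{Stab}_{SO_0(1,2)}(x)=\{R_0(\alpha)\mid\alpha\in[0,2\pi)\}\cong SO(2)$.

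Next I would invoke Lemma~\ref{surfaceG/H} with $H=\mathrm{Stab}_{SO_0(1,2)}(x)$ and $X=H_m^+$: the assignment $gH\mapsto gx$ is a well-defined bijection of $SO_0(1,2)/\mathrm{Stab}(x)$ onto $H_m^+$ which intertwines the $SO_0(1,2)$-actions on the two sides --- well-definedness because $H$ fixes $x$; surjectivity because $SO_0(1,2)$ acts transitively on $H_m^+$ (directly from \eqref{cartan}); injectivity because two group elements carrying $x$ to a common point differ by a stabiliser element. The \emph{two-fold covering} asserted for $SO_0(1,2)/K$ then enters through the passage to the two-fold covering group $G\cong SU(1,1)$ used in Lemma~\ref{iwa}: under the covering map $G\to SO_0(1,2)$ the rotation group $SO(2)$ is covered two-to-one by the maximal compact $K$, so that --- exactly as in the treatment of $\partial V^+$ in Lemma~\ref{repV+}, where one likewise works up to this covering --- the map $\mathbb{\Xi}$ of the statement becomes a two-to-one surjection onto $H_m^+$ given by the very same formula $gK\mapsto gx$. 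The step requiring the most care is precisely this bookkeeping of the two circles involved --- the subgroup $K$ inside the covering group versus the full rotation subgroup $SO(2)$ inside $SO_0(1,2)$ --- so that the factor of two is located on the correct side; the orbit--stabiliser computation and the transitivity are otherwise routine.
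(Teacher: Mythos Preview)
Your computation of the stabiliser in $SO_0(1,2)$ of $(m,0,0)^T$ as $\{R_0(\alpha)\}\cong SO(2)$ and your appeal to Lemma~\ref{surfaceG/H} for the orbit--stabiliser bijection match the paper's proof exactly. Where you diverge is in accounting for the phrase ``two-fold covering''. The paper does \emph{not} invoke the covering group $G\cong SU(1,1)$ at all; it simply notes that the Cartan-type parametrisation
\[
[0,2\pi)\times\mathbb{R}\ni(\alpha,t)\;\longmapsto\;R_0(\alpha)\,\Lambda_1(t)\begin{pmatrix}m\\0\\0\end{pmatrix}\in H_m^+
\]
is two-to-one, because $(\alpha,t)$ and $(\alpha+\pi,-t)$ have the same image. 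This is the ``two-fold covering'' meant, and it is what feeds into the Cartan decomposition~\eqref{cartan} mentioned immediately after the lemma.

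Your alternative explanation via the covering group has a gap: since both $G\to SO_0(1,2)$ and $K\to SO(2)$ are double covers and the kernel of the first is contained in $K$, one has $G/K\cong SO_0(1,2)/SO(2)\cong H_m^+$ \emph{bijectively}; no factor of two survives at the level of coset spaces. The two-to-one behaviour is a feature of the $(\alpha,t)$-chart, not of any coset space, so your bookkeeping places the factor on the wrong side. The paper's direct observation is both simpler and actually produces the claimed doubling.
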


\begin{proof} The rotations $K$ are the stabiliser of the point  $\left( \begin{smallmatrix}
							m \\
							0 \\
							0
						\end{smallmatrix} \right)$
in $SO_0(1,2)$. Note that 
	\[
	\Lambda_{1}(t) x = \left( \begin{smallmatrix}
						 m \cosh t \\
						0 \\
						m \sinh t
					\end{smallmatrix} \right) \in  \partial V^+ \; , \qquad t \in \mathbb{R} \; . 
	\]
Applying the rotations $R_0(\alpha)$, 
$\alpha \in [0, 2 \pi)$, to $\Lambda_{1}(t) x$ results in a two-fold covering 
of the mass hyper\-boloid~$H^+_m$. 
Thus the result follows from Lemma \ref{surfaceG/H}.
\end{proof}

Clearly, this result gives rise to the Cartan decomposition, see \eqref{cartan}.

\subsection{De Sitter space} Finally,  consider the case $x= o$. Clearly,
the boosts $\Lambda_2(t)$, $t \in \mathbb{R}$, form the stabiliser $A'$ of  the origin $o$. 
Moreover, the map
	\begin{equation}
	\label{ds-me}
	x(\alpha, t) \doteq R_0 (\alpha) \Lambda_{1} (t) o = R_0 (\alpha) \begin{pmatrix}
						r \sinh t  \\
						0 \\
						r \cosh t 
					\end{pmatrix}  ,  
	\end{equation}
with $\alpha \in [0, 2 \pi  )$ and $t \in \mathbb{R}$, 
provides coordinates\footnote{This should be compared with the chart introduced in 
\eqref{w1psi}, which only covers $\mathbb{W}_1$.} for $dS$. 

\begin{lemma} The coset space $G/ A' $ can be naturally identified with $dS$, 
by setting
	\[
		\mathbb{I} (g A') \doteq    g 	\left( \begin{smallmatrix}
							0 \\
							0 \\
							1
						\end{smallmatrix} \right) \; , \qquad g \in G \; .  
	\]
Moreover, $g (g' x) = (g \circ g') x$ for all $g, g' \in G$ and $x \in dS$. 
\end{lemma}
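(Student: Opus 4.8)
The plan is to obtain the statement as an immediate specialisation of Lemma~\ref{surfaceG/H}, exactly parallel to the two preceding lemmas for $\partial V^+$ and $H_m^+$. All the content lies in identifying, for the base point $x=o$, the stabiliser $H$ with $A'$ and the orbit $X$ with $dS$ (the column vector $\left(\begin{smallmatrix}0\\0\\1\end{smallmatrix}\right)$ being the origin $o$ up to the radius normalisation, so its orbit is the de Sitter space of the appropriate radius).

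First I would verify that $A'=\{\Lambda_{2}(t)\mid t\in\mathbb{R}\}$ is \emph{precisely} the stabiliser of $o$. One inclusion is read off the explicit form of $\Lambda_{2}(s)$: its third row and third column are $(0,0,1)$, so $\Lambda_{2}(s)$ fixes $o$. For the converse, if $g\in SO_{0}(1,2)$ fixes $o$, then, being a $\mathbb{g}$-isometry, $g$ preserves the $\mathbb{g}$-orthogonal complement of $o$, namely the plane $\{x_{2}=0\}$; its restriction there lies in $SO_{0}(1,1)$, and matching orientations forces $g=\Lambda_{2}(s)$ for a unique $s\in\mathbb{R}$. Passing to $G\cong SU(1,1)$, the stabiliser of $o$ is the full preimage of $\{\Lambda_{2}(t)\}$ under the covering homomorphism $G\to SO_{0}(1,2)$; since the central $\mathbb{Z}_{2}$ acts trivially on $\mathbb{R}^{1+2}$, this preimage is exactly what the text denotes $A'$, and — the observation that makes the passage to the cover harmless — the $G$-orbits and the $SO_{0}(1,2)$-orbits in $\mathbb{R}^{1+2}$ coincide.

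Next I would check that the orbit of $o$ is all of $dS$, i.e.\ that $G$ acts transitively on $dS$. This is the transitivity of $SO_{0}(1,2)$ on $dS$ already recorded in Chapter~\ref{geometry}; concretely, the chart \eqref{ds-me}, $x(\alpha,t)=R_{0}(\alpha)\Lambda_{1}(t)\,o$, surjects onto $dS$ (an elementary computation shows that as $(\alpha,t)$ ranges over $[0,2\pi)\times\mathbb{R}$ the right-hand side ranges over all triples $(y_{0},y_{1},y_{2})$ with $y_{0}^{2}-y_{1}^{2}-y_{2}^{2}=-r^{2}$), and this also reproduces item~$iii.)$ of the orbit lemma at the beginning of the chapter. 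Feeding $x=o$, $H=A'$, $X=dS$ into Lemma~\ref{surfaceG/H} then yields that $\mathbb{I}\colon G/A'\to dS$, $gA'\mapsto g\,o$, is a well-defined bijection with $\mathbb{I}(gg'A')=g\,\mathbb{I}(g'A')$; unwinding the latter gives $g(g'x)=(g\circ g')x$ for all $g,g'\in G$ and $x\in dS$.

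The only step that genuinely needs care is the covering-group bookkeeping in the identification of the stabiliser — ensuring that the symbol $A'$ appearing in the coset $G/A'$ really denotes the stabiliser of $o$ inside $G$ and that the central kernel of $G\to SO_{0}(1,2)$ does not disturb the bijection. Everything else is either already available in the excerpt (transitivity, Lemma~\ref{surfaceG/H}, the chart \eqref{ds-me}) or a one-line matrix computation.
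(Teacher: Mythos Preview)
Your approach is correct and is exactly what the paper intends: the lemma is stated without proof, the preceding text having already identified $A'=\{\Lambda_2(t)\mid t\in\mathbb{R}\}$ as the stabiliser of $o$ and the chart \eqref{ds-me} as exhibiting transitivity, so the result is meant to follow immediately from Lemma~\ref{surfaceG/H}. You fill in more detail than the paper does---in particular the covering-group bookkeeping, which the paper simply glosses over (and indeed treats somewhat loosely, writing $A'$ for the boosts in $SO_0(1,2)$ while Lemma~\ref{surfaceG/H} is stated for the two-fold cover $G$).
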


\subsection{Circles}
\label{circ-sub} We next consider the choice $H=MAN$.
Clearly, $H$ leaves the light ray 
	\[
	 	\{ \lambda  \left( \begin{smallmatrix}
							1 \\
							0 \\
							-1
						\end{smallmatrix} \right) \mid \lambda >0 \}
	\] 
passing through the origin and the point $\left( \begin{smallmatrix}
							1 \\
							0 \\
							-1
						\end{smallmatrix} \right)$ invariant. 
It is therefore natural to identify the factor group $SO(2) \cong G/MAN $ with the 
projective space 
	\[
		\partial \dot V^+ = \{ \dot y \mid y \in \partial V^+ \} \; ,  
	\]
formed by the light rays  $\dot y =  \{ \lambda y \mid \lambda >0 \}$, $y \in \partial V^+ $.
Each light ray in $\partial \dot V^+$ intersects the circle
	\begin{equation}
		\label{gamma-0}
		\Gamma_0 \doteq \left\{ R_0 (\alpha) \left( \begin{smallmatrix}
							1 \\
							0 \\
							-1
						\end{smallmatrix} \right) \mid \alpha \in [0 , 2 \pi) \right\} 
	\end{equation}
just once. However, it should be emphasised that the boosts in $A= \{ \Lambda_1 (t) \mid t \in \mathbb{R} \}$ do not leave the 
point $\left( \begin{smallmatrix}
							1 \\
							0 \\
							-1
						\end{smallmatrix} \right)$ invariant.   

\subsection{Mass shells} 
\label{massshell-sub}
Using the Hannabuss decomposition of 
$SO_0(1,2)$, almost every element $g$ in $SO_0(1,2)$ can be written in the form 
	\[
		g =  \Lambda_{2}(s) P^j \Lambda_{1}(t) D(q) \; , 
		\; \;   j \in \{0, 1\} \; . 
	\]
Thus almost all of the cosets $gMAN$, $g \in G$, (or light rays in $\partial \dot V^+$) are in one-to-one correspondence 
to points in the two hyperbolas
	\[
		\left\{ \Lambda_2 (s) \left( \begin{smallmatrix}
	m \\
	0 \\
	-m 
	\end{smallmatrix} \right) \mid s \in \mathbb{R} \right\} \cup
	\left\{ \Lambda_2 (s) P \left( \begin{smallmatrix}
	m \\
	0 \\
	-m 
	\end{smallmatrix} \right) \mid s \in \mathbb{R} \right\} \; . 
	\]
Note that $P\Lambda_2 (s)P= \Lambda_2 (-s)$ for all $s \in \mathbb{R}$. 

\begin{remark} It is convenient to choose a parametrization such that
	\[ 
	m   \cosh  s  = \sqrt{p_1^2 + m^2} \; , \qquad m   \sinh s = p_1 \; . 
	\]
Then, using $s = {\rm arcsinh}  \tfrac{p_1}{m}$, the measure is 
	${\rm d} s= \tfrac{ {\rm d} p_1 } { \sqrt{p_1^2 + m^2} }$ 
and $\Lambda_2 (s)$ is of the form
	\[ 
	\Lambda_2 (s) =   \begin{pmatrix}
	\frac{\sqrt{p_1^2 + m^2}}{m} & \frac{p_1}{m} & 0 \\
	\frac{p_1}{m} &  \frac{\sqrt{p_1^2 + m^2}}{m} & 0 \\
	0 & 0 & 1
	\end{pmatrix}  \; .
	\]  
\end{remark}

\section{Invariant measures}

\subsection{Haar measure} Using the Cartan decomposition,
the \emph{Haar measure}\index{Haar measure} can be decomposed as \cite[Chapter 9]{Vil}
	\begin{equation}
	\label{hmeasure}
		{\rm d} g =  {\rm d} \alpha \; \sinh t \, {\rm d} t \; {\rm d}  \alpha'  \; , \qquad  
		g = R (\alpha) \Lambda_1 (t) R (\alpha')\; ,
	\end{equation}
with $\alpha , \alpha' \in [0, 2\pi)$ and $t \in \mathbb{R}$. 

On the other hand, using the Iwasawa decomposition, 
the Haar measure on $SO_0(1,2)$ can be written as
	\[
		{\rm d} g = \frac{{\rm d} \alpha}{2 \pi} \,  {\rm e}^t {\rm d} t \, {\rm d} q \; , 
		\qquad g =  R_0(\alpha) P^k \Lambda_{1}(t) D(q) \; ,  
	\]
with  $\alpha \in [ 0, 2\pi)$, $t, q \in \mathbb{R}$ and $k \in \{ 0, 1\}$. 

\subsection{The invariant measure on the mass hyperboloid}

The restriction of the measure \eqref{hmeasure} to the mass hyperboloid $H_m^+$ equals  
${\rm d} \alpha \,  \sinh t  {\rm d} t$.  The latter 
equals twice the measure  
	\begin{align}
	\label{dSmeasure}
			\int  {\rm d}^3 p \; \theta (p_0) \delta (p_0^2 - p_1^2 - p_2^2  - m^2) 
		& = \int \rho {\rm d} \rho \, {\rm d} \alpha \,  {\rm d} p_0 \; \theta (p_0) \tfrac{ \delta \left(\rho - \sqrt{p_0^2 - m^2} \right)}
		{2 \sqrt{ p_0^2 - m^2} }  
	\end{align}
used by Bros and Moschella in \cite{BM}. This can be seen by setting $p_0= m \cosh t$, which implies ${\rm d} p_0 = m \sinh t {\rm d} t$.
In the last line we have changed coordinates, setting $p_1= \rho \sin \alpha$ and $p_2 = \rho \cos \alpha$. 

\subsection{The invariant measure on the one sheeted hyperboloid}

The measure  used by Bros and Moschella in \cite{BM}, 
	\begin{align}
	\label{hhmeasure}
		\int  {\rm d}^3 x \;   \delta (x_0^2 - x_1^2 - x_2^2 + r^2 ) 
		& = \int \rho {\rm d} \rho \, {\rm d} \psi \,  {\rm d} x_0 \;  \tfrac{ \delta \left(\rho - \sqrt{x_0^2 + r^2} \right)}
		{2 \sqrt{ x_0^2 + r^2} }  
		& = \frac{1}{2} \int_{dS}  r\, {\rm d} \psi \,  {\rm d} x_0 
	\end{align}
differs from the measure we will use, namely 
	\[ 
		{\rm d} \mu_{dS} \doteq {\rm d} x_0 \, r \, {\rm d} \psi 
	\]
by a factor two. Taking \eqref{ds-me} into account we find 
	\[
		{\rm d} \mu_{dS} = r^2 \cosh t \, {\rm d} t  \, {\rm d} \psi \; . 
	\]

\subsection{The invariant measure on the forward light cone}
Setting $p_0 = {\rm e}^{-t }$, we find $ {\rm d} p_0 = {\rm e}^{-t} {\rm d} t $ and, consequently,
the invariant measure on the forward light cone   
	\begin{align*}
		\partial V^+ \setminus \{(0, 0, 0) \} & = \left\{  \left( \begin{smallmatrix} p_0 \\
		p_0 \sin \alpha \\
		- p_0 \cos \alpha
		\end{smallmatrix} \right) \mid p_0 >0 , \alpha \in [0, 2 \pi) \right\} \\
		& \cong \{g MN \mid g \in G \}
	\end{align*}
is given --- up to normalisation --- by the formula
\cite[Chapter 9.1.9, Equ.~(13)]{Vil}
	\[
	 	|p_0|^{-1} {\rm d} p_1 {\rm d} p_2 = {\rm d} p_0 {\rm d} \alpha \; , 
	\]
in agreement with taking the limit $m \to 0$ in \eqref{hhmeasure}. There  one finds
$\tfrac{1}{2} {\rm d} \alpha \, {\rm d} p_0 $. 

\subsection{Measures on contours on the light-cone}

Let $\Gamma$ be a contour on the light cone which intersects every light ray of the light cone at one point. 
Following \cite[Chapter 9.1.9]{Vil}, we denote by ${\rm d} \mu_\Gamma$ a measure on this contour such that 
\label{Gammacpluspage}
	\[
		|p_0|^{-1} {\rm d} p_1 {\rm d} p_2 = {\rm d} \lambda {\rm d} \mu_\Gamma (\eta) \; , \qquad \eta \in \Gamma \; , 
	\]
where $p= \lambda \eta$, $\lambda >0$, $\eta \in \Gamma$. It follows that if a function $f(p)$ on $\partial V^+$ is homogeneous 
of degree $-1$, that is $f(\lambda p) = \lambda^{-1} f(p)$, $\lambda >0$, then the integral 
	\[
		\int_\Gamma {\rm d} \mu_\Gamma (\eta) f (\eta) 
	\]
does not depend on the choice of $\Gamma$; in agreement with \cite[Proposition 10]{BM2}.

\part{Harmonic Analysis}

\chapter{Induced Representations for the Lorentz Group}

Assume we are given a quantum theory, formulated in terms of operators (which 
play the role of the observables) and normalised, positive linear functionals, \emph{i.e.}, states (which play the role
of physical expectation values). Then  
a symmetry operation is realised by a so-called \emph{Wigner automorphism} \cite{Wigner}. 
Given an invariant state $\omega$, any such automorphism can be implemented by 
either a unitary or an anti-unitary operator in the GNS Hilbert space ${\mathcal H}_\omega$. 
These operators are unique up to a factor. Thus, given a group of symmetries, one is led to study
its \emph{projective representations}. The latter are in one-to-one correspondence with the 
unitary representations of the \emph{universal covering group} of the symmetry 
group\footnote{The case $SO_0(1,2)$ is somewhat exceptional, as the universal covering group
does \emph{not} coincide with the two-fold covering group. 
However, for simplicity, here we deal only with the latter. 
These representations describing anyons
will be discussed elsewhere.}. 

Unitary irreducible representations of the Lorentz group $SO_0(1,2)$ (and its two-fold covering group) 
have first\footnote{At the same time, Gelfand and Naimark investigated the group $SL(2, \mathbb{C})$ \cite{GN}.}
been constructed by Bargmann~\cite{Ba}, using multipliers. The latter had appeared in
Schur's theory of projective representations. 
The approach we will follow here was pioneered by Wigner \cite{W} and 
Mackey~\cite{Mack}, based on earlier work by Frobenius (see, \emph{e.g.},~\cite{Vara}). 

\section{The general case\index{induced representation}}

We briefly recall some key elements of 
the general theory of induced representations,  following \cite{Folland} (see also~\cite{Bruhat, GG, 
Knapp, Kostant, Lipsman, Mack, Vara, Vil, VG, VGG}).

\subsection{Modular functions} 

Let $G$ be a locally compact topological group $G$. 
Then $G$ has a left invariant \emph{Haar measure}\index{Haar measure} 
$\mu_G$ on the $\sigma$-algebra of Borel sets, which is unique up to a normalisation 
factor; see, \emph{e.g.}, \cite[Theorems~(2.10) and (2.20)]{Folland}.
In general, the left Haar measure $\mu_G$ on $G$ is not equal to the 
right-invariant Haar measure. However, there always exists a multiplicative $\mathbb{R}^+$-valued function 
$\Delta_G$ on~$G$, called the \emph{modular function}\index{modular function} of $G$, such that
	\[
		\int_G {\rm d} \mu_G (g) \; f(g g') = \frac{1}{\Delta_G (g')} \int  {\rm d} \mu_G (g) \; f(g) 
		\qquad \forall g' \in G
	\]
and for every $\mu_G$-integrable function $f$ on $G$. 
The modular function relates the left and the right invariant Haar measure:
	\[
		\int_G {\rm d} \mu_G (g) \; f(g^{-1}) = \int_G {\rm d} \mu_G (g) \; f(g) \Delta_G (g^{-1})  \;  . 
	\]
In case $ \Delta_G (g)=1$ for all $g \in G$,
the left and the right Haar measure coincide, and~$G$ is called \emph{unimodular}.

\goodbreak
\begin{lemma} The twofold covering group of $SO_0(1,2)$ is unimodular.
\end{lemma}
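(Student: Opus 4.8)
The plan is to establish unimodularity of (the twofold covering group of) $SO_0(1,2)$ by exhibiting an explicit left Haar measure and checking directly that it is also right invariant; equivalently, one shows that the modular function $\Delta_G$ is identically $1$. There are several clean routes here, and I would present the shortest one: use the Iwasawa decomposition $G = KAN$ from Lemma~\ref{iwa}, together with the already-stated formula for the Haar measure
\[
	{\rm d} g = \frac{{\rm d}\alpha}{2\pi}\, {\rm e}^{t}\, {\rm d} t\, {\rm d} q\,, \qquad g = R_0(\alpha) P^{k}\Lambda_1(t) D(q)\,,
\]
and the scaling relation $\Lambda_1(t) D(q)\Lambda_1(-t) = D({\rm e}^{t}q)$ from \eqref{scalingH}. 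First I would note that, since $G$ is a connected Lie group (the discrete factor $M\cong\mathbb Z_2$ and the reflection $P$ do not affect the modular function, as $\Delta_G$ is a continuous homomorphism into $\mathbb{R}^{+}$ and hence trivial on compact and on discrete subgroups), it suffices to treat the identity component and work infinitesimally: $\Delta_G(g) = |\det\mathrm{Ad}(g)|^{-1}$, so $G$ is unimodular if and only if $\mathrm{tr}\,\mathrm{ad}(X) = 0$ for every $X$ in the Lie algebra $\mathfrak{so}(1,2)$.

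The core computation is then that $\mathfrak{so}(1,2)$ is unimodular because it is semisimple: for a semisimple Lie algebra $\mathfrak g$ one has $\mathrm{tr}\,\mathrm{ad}(X) = 0$ for all $X$, since $\mathfrak g = [\mathfrak g,\mathfrak g]$ and the trace of a commutator vanishes. Concretely one can just verify this on the basis $\{K_0, L_1, L_2\}$ given in the text: each of $\mathrm{ad}(K_0)$, $\mathrm{ad}(L_1)$, $\mathrm{ad}(L_2)$ is a $3\times 3$ matrix whose diagonal entries sum to zero (this follows at once from the structure constants implicit in the Casimir relation $-K_0^2 + L_1^2 + L_2^2 = 2\cdot\mathbb 1_3$ and the standard $\mathfrak{sl}(2,\mathbb R)$ commutation relations). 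Hence $\Delta_G \equiv 1$. Alternatively, and perhaps more in the spirit of the surrounding material, I would check right invariance of ${\rm d} g$ directly: writing a generic right translation as a product of right translations by $D(q_0)$, $\Lambda_1(t_0)$, and $R_0(\alpha_0)$, one uses \eqref{scalingH} to commute $D$ past $\Lambda_1$, picking up a Jacobian ${\rm e}^{t}$ for ${\rm d} q$ that is exactly compensated by the ${\rm e}^{t}\,{\rm d} t$ factor, while the ${\rm d}\alpha/2\pi$ factor is manifestly translation invariant on the compact group $K$; the discrete shifts $P^{k}$ contribute trivially.

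The only mild subtlety — and the step I expect to need the most care — is bookkeeping the roles of the non-normal compact/discrete pieces: the Iwasawa-type factorization here is $KAN$ with $K$ the twofold cover of $SO(2)$ containing the $\mathbb Z_2$ generated by $P$, and one must make sure that the "obvious" invariances of the factors ${\rm d}\alpha$, ${\rm e}^{t}{\rm d} t$, ${\rm d} q$ genuinely assemble into a two-sided invariant measure rather than merely a left-invariant one. This is handled by the general fact that a Lie group is unimodular iff its Lie algebra is (via $\Delta_G = |\det\mathrm{Ad}|^{-1}$), reducing everything to the purely algebraic statement $\mathrm{tr}\,\mathrm{ad} \equiv 0$ on $\mathfrak{so}(1,2)$, which is immediate from semisimplicity. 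I would therefore lead with the semisimplicity argument as the clean proof and mention the explicit Iwasawa-measure check as a concrete corroboration.
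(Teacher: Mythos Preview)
Your proposal is correct, and the argument you lead with --- semisimplicity of $\mathfrak{so}(1,2)\cong\mathfrak{sl}(2,\mathbb{R})$ gives $\mathfrak g=[\mathfrak g,\mathfrak g]$, whence the Lie-algebra avatar of the modular function (equivalently $\mathrm{tr}\,\mathrm{ad}$) vanishes, forcing $\Delta_G\equiv 1$ on the connected group --- is exactly the paper's proof, just phrased through $\mathrm{tr}\,\mathrm{ad}$ rather than through the differential $\Delta_*$ of the modular character. Your alternative Iwasawa-measure check is a nice concrete corroboration that the paper does not include, but it is not needed.
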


\begin{proof} The two-fold covering group of $SO_0(1,2)$ is isomorphic to $SL(2, \mathbb{R})$.
For $SL(2, \mathbb{R})$, consider the action $\Delta_* \colon \mathfrak{sl}(2,\mathbb{R}) \to \mathbb{R}$ of the character $\Delta$
(given by the modular function)
on the Lie algebra of $SL(2, \mathbb{R})$. Since $\mathbb{R}$ is
abelian, $\Delta_* ( [ X, Y] ]) = 0$ for all $X, Y \in \mathfrak{sl}(2,\mathbb{R})$. But every element of $\mathfrak{sl}(2,\mathbb{R})$ 
is of this form, hence $\Delta_*=0$. Since $SL(2, \mathbb{R})$ is connected this determines $ \Delta_G (g)=1$ for all $g \in G$.
\end{proof}

\begin{remark}
\label{notunimodular}
Consider the group $AN = \{ \Lambda_1(t) D(q) \mid t, q \in \mathbb{R} \}$. 
We compute (using \eqref{scalingH})
	\begin{align*}
		\int {\rm d}t {\rm d}q \;  f \bigl(  \Lambda_1(t)  D(q)  \Lambda_1(t') D(q') \bigr) 
		& = \int {\rm d}t {\rm d}q \;  f \bigl(  \Lambda_1(t+t') D( {\rm e}^{-t'} q + q') \bigr)  \\
		& = \int {\rm d}t'' {\rm d}q'' {\rm e}^{t'}  \;  f \bigl(  \Lambda_1(t'') D( q'' ) \bigr)  \; ;
	\end{align*}
\emph{i.e.}, $\Delta_{AN} \bigl( \Lambda_1(t') D(q') \bigr) = {\rm e}^{-t'}$. 
Thus $AN$ is \emph{not} unimodular. 
\end{remark}

\subsection{Invariant measures on the quotient space}

Let $H$ be a closed subgroup of a locally compact group $G$. Moreover,
let $\Delta_G$ and $\Delta_H$ denote the modular functions of $G$ and $H$, respectively. 

\begin{theorem}[Theorem 2.49, \cite{Folland}]
The quotient space $G/H$ admits a nonzero positive $G$-invariant measure $\mu_{G / H}$, 
if and only if 
	\begin{equation}
	\label{modular function}
		\Delta_{G} \upharpoonright H = \Delta_H \; . 
	\end{equation}
If \eqref{modular function} holds, then the positive invariant measure $\mu_{G / H}$
is unique (up to multiplication
by a positive constant). Moreover, one can normalize the invariant measure $\mu_{G/H}$ on 
$G/H$ such that for every $f$ in $C_0 (G)$,
	\begin{equation}
		\label{quotient measure}
		\int_{G / H} {\rm d} \mu_{G / H} (gH) \int_H {\rm d} \mu_H (h)  \; f(gh) 
		= \int_G {\rm d} \mu_G (g) \; f(g)  \; , 
	\end{equation}
where $\mu_G$  and $\mu_H$ denote the Haar measures of $G$ and $H$, respectively.
\end{theorem}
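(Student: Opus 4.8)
The plan is to realise the $G$-invariant measure on $G/H$ (when it exists) via the Riesz representation theorem applied to a functional built from the canonical averaging map, and to extract the necessity of \eqref{modular function} from the way that map transforms under right translations by $H$. Fix left Haar measures $\mu_G,\mu_H$, let $\pi\colon G\to G/H$ be the quotient map, and introduce the \emph{averaging operator}
\[
  P\colon C_c(G)\longrightarrow C(G/H)\,,\qquad (Pf)(gH)\doteq \int_H f(gh)\,{\rm d}\mu_H(h)\,,
\]
on continuous functions of compact support. The first block of work is to record the elementary properties of $P$: it is well defined and $Pf\in C_c(G/H)$ with $\operatorname{supp}(Pf)\subset \pi(\operatorname{supp}f)$ (the $h$-integration runs over a fixed compact set once $g$ is confined to a compact set); it intertwines left translations, $P(L_af)=L_a(Pf)$ for $a\in G$; it scales under right $H$-translations, $P(R_{h_0}f)=\Delta_H(h_0)^{-1}Pf$ for $h_0\in H$, by the right-translation rule for $\mu_H$; and — the one nontrivial point — $P$ is \emph{surjective} onto $C_c(G/H)$. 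Surjectivity I would prove by the standard lifting trick: given $\varphi\in C_c(G/H)$ with support $C$, pick a compact $C'\subset G$ with $\pi(C')\supset C$ and a $\psi\in C_c(G)$, $\psi\ge 0$, $\psi>0$ on $C'$, so that $P\psi>0$ on $C$; then $f\doteq(\varphi\circ\pi)\,\psi/(P\psi\circ\pi)$, extended by $0$ where $P\psi\circ\pi=0$, lies in $C_c(G)$ and satisfies $Pf=\varphi$.

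For the necessity direction, suppose a nonzero $G$-invariant measure $\mu_{G/H}$ exists. Then $I(f)\doteq\int_{G/H}(Pf)\,{\rm d}\mu_{G/H}$ is a nonzero positive linear functional on $C_c(G)$ which, by $P(L_af)=L_a(Pf)$ and left invariance of $\mu_{G/H}$, is left invariant; by uniqueness of Haar measure $I=c\int_G(\,\cdot\,)\,{\rm d}\mu_G$ with $c>0$. Evaluating $I$ at $R_{h_0}f$ for $h_0\in H$ in two ways — on one side $I(R_{h_0}f)=\Delta_H(h_0)^{-1}I(f)$ from the scaling of $P$, on the other $c\int_G(R_{h_0}f)\,{\rm d}\mu_G=\Delta_G(h_0)^{-1}I(f)$ from the right-translation rule for $\mu_G$ — and using that $I$ is nonzero, one gets $\Delta_G(h_0)=\Delta_H(h_0)$ for all $h_0\in H$, i.e.\ \eqref{modular function}.

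For the sufficiency direction the heart of the matter is the \emph{kernel lemma}: if \eqref{modular function} holds, then $Pf=0$ implies $\int_G f\,{\rm d}\mu_G=0$. I would prove it by choosing, via surjectivity of $P$, a $u\in C_c(G)$ with $Pu\equiv 1$ on $\pi(\operatorname{supp}f)$, so that $\int_G f\,{\rm d}\mu_G=\int_G f(g)\,(Pu)(\pi(g))\,{\rm d}\mu_G(g)=\int_G\!\int_H f(g)u(gh)\,{\rm d}\mu_H(h)\,{\rm d}\mu_G(g)$; applying Fubini, the right-translation rule for $\mu_G$, and the inversion formula for $\mu_H$, the expression rearranges into $\int_G u(g)\bigl[\int_H \Delta_G(h)\Delta_H(h)^{-1}f(gh)\,{\rm d}\mu_H(h)\bigr]{\rm d}\mu_G(g)$, and here \eqref{modular function} collapses the inner integral to $(Pf)(gH)=0$. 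Granted the kernel lemma and surjectivity, $J(\varphi)\doteq\int_G f\,{\rm d}\mu_G$ for any $f$ with $Pf=\varphi$ is a well-defined positive linear functional on $C_c(G/H)$; Riesz produces the Radon measure $\mu_{G/H}$ satisfying $\int_{G/H}(Pf)\,{\rm d}\mu_{G/H}=\int_G f\,{\rm d}\mu_G$, which is exactly \eqref{quotient measure}, and left invariance of $\mu_{G/H}$ follows from $P(L_af)=L_a(Pf)$ plus left invariance of $\mu_G$. Uniqueness is then immediate: two invariant measures on $G/H$ pull back through $P$ to left-invariant functionals on $C_c(G)$, both proportional to $\mu_G$, hence proportional to each other, and surjectivity of $P$ upgrades this to proportionality of the measures. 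The main obstacle is the kernel lemma — i.e.\ keeping the modular-function bookkeeping in the double Fubini computation straight — together with the (more routine but genuinely topological) surjectivity of $P$, which is where the local product structure of $G/H$ is used.
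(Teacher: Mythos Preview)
The paper does not prove this theorem at all: it is stated with attribution ``[Theorem 2.49, \cite{Folland}]'' and simply quoted from Folland's textbook as background, with no proof given. Your argument is correct and is in fact the standard proof (essentially the one in Folland): construct the averaging map $P\colon C_c(G)\to C_c(G/H)$, establish surjectivity and the kernel lemma under the modular condition, and invoke Riesz to produce $\mu_{G/H}$; necessity comes from comparing the two right-translation behaviours of the induced left-invariant functional on $C_c(G)$.
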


\begin{remark}
One can define a linear map $C_0(G) \to C_0(G/H)$ by setting
	\[
		f^H (gH) \doteq \int_H {\rm d} \mu_H (h)  \; f(gh)  \; , \qquad f \in C_0 (G) \; . 
	\]
The identity \eqref{quotient measure} then takes the form 
	\begin{equation}
	\label{GHmeasure}
		\int_{G / H} {\rm d} \mu_{G / H} (gH) \; f^H (gH) 
		= \int_G {\rm d} \mu_G (g) \; f(g)  \; .
	\end{equation}
\end{remark}

\begin{lemma}
Let $H \subseteq G$ be compact, then $\Delta_{G} \upharpoonright H = 1$. In particular, if $G$ is
compact, then $G$ is unimodular.
\end{lemma}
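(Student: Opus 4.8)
The plan is to exploit a fact recorded when the modular function is introduced in \cite{Folland}: that $\Delta_{G}\colon G\to(\mathbb{R}^{+},\times)$ is a \emph{continuous} group homomorphism. Restricting it to the compact subgroup $H$ gives a continuous homomorphism $\Delta_{G}\upharpoonright H\colon H\to(\mathbb{R}^{+},\times)$, and the whole statement reduces to identifying the possible images of such a map. So the first and essentially only step is to analyse $\Delta_{G}(H)$.

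First I would observe that $\Delta_{G}(H)$ is the continuous image of a compact set, hence a compact subgroup of the multiplicative group $(\mathbb{R}^{+},\times)$. Via the isomorphism $\log\colon(\mathbb{R}^{+},\times)\to(\mathbb{R},+)$ this corresponds to a compact subgroup of $(\mathbb{R},+)$. But any subgroup of $(\mathbb{R},+)$ containing a nonzero element $a$ also contains the unbounded set $\{na\mid n\in\mathbb{N}\}$ and therefore cannot be bounded, let alone compact; hence the only compact subgroup of $(\mathbb{R},+)$ is $\{0\}$. Transporting back, $\Delta_{G}(H)=\{1\}$, that is, $\Delta_{G}\upharpoonright H=1$, which is the first assertion.

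For the second assertion I would simply apply the first with $H=G$: if $G$ is compact it is a compact subgroup of itself, so $\Delta_{G}\equiv 1$ and $G$ is unimodular by definition. (Alternatively one may argue directly: a compact group has finite total Haar mass $\mu_{G}(G)\in(0,\infty)$, and inserting the constant function $f\equiv 1$ into the defining relation $\int_{G}{\rm d}\mu_{G}(g)\,f(gg')=\Delta_{G}(g')^{-1}\int_{G}{\rm d}\mu_{G}(g)\,f(g)$ yields $\mu_{G}(G)=\Delta_{G}(g')^{-1}\mu_{G}(G)$, hence $\Delta_{G}(g')=1$ for every $g'\in G$.)

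I do not expect a genuine obstacle here: the only real inputs are the continuity of $\Delta_{G}$ and the elementary structure of $(\mathbb{R},+)$. The one point worth flagging is not to conflate $\Delta_{G}\upharpoonright H$ with the modular function $\Delta_{H}$ of $H$ itself — although the very same argument shows $\Delta_{H}\equiv 1$ as well, so that a compact subgroup automatically satisfies the hypothesis $\Delta_{G}\upharpoonright H=\Delta_{H}$ of the preceding theorem on invariant measures on $G/H$.
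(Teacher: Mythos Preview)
Your proof is correct and follows essentially the same approach as the paper: the paper simply notes that continuity of $\Delta_G$ forces $\Delta_G(H)$ to be a compact subgroup of $\mathbb{R}^+$, hence $\{1\}$. Your version just spells out the last step via the $\log$ isomorphism, and your alternative argument for the unimodularity of compact $G$ is a pleasant bonus not present in the paper.
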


\begin{proof} As $\Delta_G$ 
is continuous, it follows that $\Delta_G (H)$ is a compact subgroup of~$\mathbb{R}^+$
and hence equal to $\{1\}$.
\end{proof}

\subsection{Quasi-invariant measures}

In the case we are interested in, the condition \eqref{modular function} is not satisfied. 
Consequently, there is no $G$-invariant measure on $G/H$. 

\begin{definition}
A regular Borel measure $\mu$ on $G/H$ is called 
\begin{itemize}
\item[$i.)$] \emph{quasi-invariant}\index{quasi-invariant measure},
if the measure  $\mu$ and the measure
	\[
		\mu^g (\, . \, ) \doteq \mu(g \, . \, )
	\]
are mutually absolutely continuous for all $g \in G$;
\item[$ii.)$] \emph{strongly quasi-invariant}, if there exists a continuous $\mathbb{R}^+$-valued 
function $\lambda_g (g'H)$ on $G \times G/H$ such that
	\[
		\qquad \qquad
		{\rm d} \mu^g (g'H) = \lambda_g (g'H) {\rm d} \mu (g'H)
		\qquad \forall g \in G\; ,  \quad \forall g'H \in G/H \; , 
	\]
\emph{i.e.}, $\lambda_g$ is the Radon-Nikodym derivative
	\begin{equation}
	\label{lambda-function}
		\lambda_g (g'H) \doteq\frac{ {\rm d} \mu^g} { {\rm d} \mu}  (g'H)  \; , \qquad g, g' \in G \; . 
	\end{equation}
\end{itemize}
\end{definition}

Quasi-invariant measures send null sets  into null sets under the action 
of $G$. Strongly quasi-invariant measures on $G/H$ are closely related to
rho-functions on~$G$. 

\begin{definition}
A real-valued function $\rho$ on~$G$ is a \emph{rho-function}\index{rho-function} for $(G,H)$, if 
it is positive, continuous, and satisfies
	\[
		\rho(gh) = \frac{\Delta_H(h)}{ \Delta_G (h)} \rho(g)
	\]
for all $g \in G$ and $h \in H$.
\end{definition}
	
\begin{remark} 
\label{rhocomputed}
Let $G= SO_0(1,2)$ and $AN = \{ \Lambda_1(t) D(q) \mid t, q \in \mathbb{R} \}$. It follows that  $\Delta_G(h)=1$ for all $h \in AN$.  
Consequently, 
	\[
		\rho(h) = \Delta_{AN} (h)  \rho(\mathbb{1}) \qquad \forall h \in AN \; . 
	\]
Here $\mathbb{1}$ denotes the identity in $G$. We have already seen (in Remark \ref{notunimodular}) that 
	\[
		\Delta_{AN} \bigl( \Lambda_1(t) D(q) \bigr) = {\rm e}^{-t} \; , \qquad t, q \in \mathbb{R} \; . 
	\]
Thus (possibly up to a constant) $\rho\bigl( \Lambda_1(t) D(q) \bigr) = {\rm e}^{-t} $. 
\end{remark}

\begin{theorem}[see Theorem (2.56), \cite{Folland}]
\label{cocycle-relation}
Let $\rho$ be a rho-function for~$(G,H)$. It follows that
\begin{itemize}
\item[$i.)$] there exists a strongly quasi-invariant measure $\mu$ on $G/H$ such that 
	\begin{equation}
	\label{GHmeasurea}
		\int_{G/H} {\rm d} \mu  (gH) \; f^H(gH)   
		= \int_G {\rm d} \mu_G (g) \;  \rho(g) f(g)  \qquad \forall f \in C_0(G) \; ;
	\end{equation}
\item[$ii.)$] the measures $\mu^g$ and $\mu$  are absolutely continuous to each other;
\item[$iii.)$] the Radon-Nikodym derivative\index{Radon-Nikodym derivative} is given by 
	\begin{equation}
	\label{lambda-function-2}
		\lambda_g (g'H) = \frac{\rho (gg')}{\rho (g') } \; , \qquad g, g' \in G \; ;
	\end{equation}
\item[$iv.)$] the Radon-Nikodym derivative satisfies the \emph{cocycle relation}\index{cocycle}
	\[
		\lambda_{g_1 g_2} (g H ) = \lambda_{g_1} (g_2g H) \lambda_{g_2} (gH)  		
		\qquad 
		\forall g_1, g_2, g  \in G \; .  
	\]	
\end{itemize}
\end{theorem}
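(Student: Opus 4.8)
The plan is to construct the measure $\mu$ directly from the rho-function $\rho$ via the Riesz representation theorem, and then to read off the remaining properties from the defining functional equation. First I would observe that the map $f \mapsto f^H$ sends $C_0(G)$ onto $C_0(G/H)$ (this uses that $\rho$ is continuous and positive and the standard fact that $f \mapsto f^H$ is surjective — one can pre-multiply any $\varphi \in C_0(G/H)$ by a compactly supported function on $G$ that equals $1$ on a transversal neighbourhood of $\operatorname{supp}\varphi$). Hence the assignment
	\[
		\Lambda(f^H) \doteq \int_G {\rm d}\mu_G(g)\; \rho(g) f(g)
	\]
is well-defined provided it is independent of the choice of $f$ with a given $f^H$; equivalently, $f^H \equiv 0$ must force $\int_G \rho\, f \, {\rm d}\mu_G = 0$. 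This is the first real step and it is the heart of the construction: using the rho-function identity $\rho(gh) = \tfrac{\Delta_H(h)}{\Delta_G(h)}\rho(g)$ together with Weil's integration formula relating $\mu_G$, $\mu_H$ and a (possibly only formal) disintegration, one checks that $\int_G \rho f\,{\rm d}\mu_G$ depends on $f$ only through $f^H$. Positivity of $\Lambda$ is then immediate (if $f \ge 0$ then $\rho f \ge 0$), and $\Lambda$ is a positive linear functional on $C_0(G/H)$, so Riesz gives a unique regular Borel measure $\mu$ on $G/H$ with $\int_{G/H} f^H\,{\rm d}\mu = \int_G \rho f\,{\rm d}\mu_G$, which is precisely \eqref{GHmeasurea} and proves $i.)$.

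For $ii.)$ and $iii.)$ I would apply $i.)$ after a left translation. Fix $g \in G$ and replace $f$ by $L_g f$, where $(L_g f)(g') = f(g^{-1}g')$. Since $\mu_G$ is left-invariant, $\int_G \rho\,(L_g f)\,{\rm d}\mu_G = \int_G \rho(gg')\,f(g')\,{\rm d}\mu_G(g')$, and one computes $(L_g f)^H(g'H) = f^H(g^{-1}g'H)$. Comparing with the definition of $\mu^g$ and writing $\rho(gg') = \tfrac{\rho(gg')}{\rho(g')}\rho(g')$, the change-of-variables identity \eqref{GHmeasurea} forces
	\[
		\int_{G/H} f^H(g'H)\,{\rm d}\mu^g(g'H) = \int_{G/H} f^H(g'H)\, \frac{\rho(gg')}{\rho(g')}\,{\rm d}\mu(g'H) \; .
	\]
Since $f^H$ ranges over all of $C_0(G/H)$ and since $\tfrac{\rho(gg')}{\rho(g')}$ is a well-defined continuous positive function on $G/H$ (the rho-function identity makes the quotient independent of the coset representative $g'$), this gives ${\rm d}\mu^g = \tfrac{\rho(g\,\cdot\,)}{\rho(\,\cdot\,)}\,{\rm d}\mu$, hence $\mu$ is strongly quasi-invariant, the two measures are mutually absolutely continuous, and $\lambda_g(g'H) = \tfrac{\rho(gg')}{\rho(g')}$, which is $iii.)$.

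Finally, $iv.)$ is a direct algebraic consequence of the formula in $iii.)$: for $g_1, g_2, g \in G$,
	\[
		\lambda_{g_1 g_2}(gH) = \frac{\rho(g_1 g_2 g)}{\rho(g)}
		= \frac{\rho(g_1 (g_2 g))}{\rho(g_2 g)} \cdot \frac{\rho(g_2 g)}{\rho(g)}
		= \lambda_{g_1}(g_2 g H)\,\lambda_{g_2}(gH) \; ,
	\]
where in the middle step one multiplies and divides by $\rho(g_2 g)$. The main obstacle is the well-definedness argument in step one — reconciling the rho-function transformation law with the (non-invariant, $AN$ being non-unimodular in our case of interest) quotient integration formula, so that $\int_G \rho f\,{\rm d}\mu_G$ genuinely descends to a functional of $f^H$; everything after that is bookkeeping with the explicit Radon–Nikodym formula. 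Uniqueness of $\mu$ given $\rho$ follows since $f \mapsto f^H$ is surjective, so $\Lambda$ and hence $\mu$ are determined.
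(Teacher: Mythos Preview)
The paper does not prove this theorem at all: it is stated with the attribution ``see Theorem~(2.56), \cite{Folland}'' and then used without further argument. Your proposal is the standard proof from Folland's book and is correct in outline --- define the positive linear functional $\Lambda(f^H) = \int_G \rho f\,{\rm d}\mu_G$, check it is well-defined (this is the only nontrivial step, and your identification of the rho-function transformation law $\rho(gh) = \tfrac{\Delta_H(h)}{\Delta_G(h)}\rho(g)$ as the key ingredient is right: it makes $\tfrac{\rho(gg')}{\rho(g')}$ depend only on the coset $g'H$), invoke Riesz, and then read off the Radon--Nikodym derivative by left-translating $f$; the cocycle identity is then pure algebra. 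Since the paper simply quotes the result, there is nothing to compare your argument against.
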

	
\goodbreak 
\begin{remarks}
\label{rhot} \quad
\begin{itemize}
\item[$i.)$] Clearly, \eqref{GHmeasurea} should be compared with \eqref{GHmeasure}.
\item[$ii.)$] 
In case $g'H= R_0(\alpha') MAN$, Remark \ref{rhocomputed} implies that $\rho(R_0(\alpha'))= 1$. Thus 
\eqref{lambda-function} implies that
	\[
		\lambda_g (g'H) = \rho (g) \; .
	\]
In other words, 
	\begin{equation}
	\label{rhog}
		\lambda_{R_0(\alpha) \Lambda_1(t) D(q)} (g'H) = {\rm e}^{-t}
	\end{equation}
for all $\alpha \in [0, 2 \pi)$ and $t, q \in \mathbb{R}$.
\end{itemize}
\end{remarks}

\subsection{Induced representations\index{representation}}
Let $G$ be a locally compact group, $H$ a closed subgroup, 
	\[ 
		\mathbb{\Gamma} \colon G \to G / H
	\]
the canonical quotient map, and $\pi \colon H \to {\mathscr B}({\mathfrak H})$ a representation 
of $H$ on some Hilbert space ${\mathfrak H}$. Denote the norm and the inner product on 
${\mathfrak H}$ by $\| u \|_{\mathfrak H}$
and  $ \langle u , v \rangle_{\mathfrak H}$, and denote by 
$C(G,{\mathfrak H})$ the space of continuous functions from $G$ to ${\mathfrak H}$. 
Now consider the following space of vector valued functions \cite{Folland}:
	\[
		\mathcal{F}_0 = \{ f \in C(G,{\mathfrak H}) \mid \mathbb{\Gamma} ({\rm supp} \, f) \;
		\text{is compact}, \; f(gh)= \pi (h^{-1}) f(g) \; \text{for} \; h \in H, g \in G \}.
	\]
Note that if $\pi$ is unitary and $f \in \mathcal{F}_0$, then $\| f \|_{\scriptscriptstyle {\mathfrak H} }$ depends only on 
the equivalence classes $gH$, $g \in G$. 

\begin{definition}
\label{Fmu}
Let $\mu$ be a strongly quasi-invariant measure on $G/H$.
\begin{itemize}
\item[$i.)$] In case $\pi$ is unitary, define an \emph{inner product}\index{inner product} on $\mathcal{F}_0$ by setting
	\begin{equation}
	\label{Spr}
		\langle f, f' \rangle_{\mu} \doteq \int_{G/H} 
		{\rm d} \mu(gH) \; \langle f (g), f' (g) \rangle_{\mathfrak H}  \; , 
		\qquad  f, f' \in \mathcal{F}_0 \; . 
	\end{equation}
\item[$ii.)$]  
In case $\pi$ is a non-unitary character of $H$, the scalar product \eqref{Spr} gives rise to a bounded (non-unitary) representation
of $G$ on the completion of~$\mathcal{F}_0$. However, it may still be 
a possibility to find a new \emph{inner product}\index{inner product} on $\mathcal{F}_0$ 
with respect to which the induced representation becomes unitary. 
If that is possible, we refer to the resulting unitary representation on the completion of $\mathcal{F}_0$
as the complementary series representation (see, \emph{e.g.}, \cite[p.~32]{Lipsman}).
\end{itemize}
In both cases, denote the completing of $\mathcal{F}_0$ w.r.t.~the norm $\| f \|_\mu \doteq \sqrt{ \langle f, f\rangle_{\mu}}$ 
by $\mathcal{F}_\mu$. 
The \emph{induced representation}\index{induced representation} $ \Pi_\mu (g)$ on the Hilbert 
space~$\mathcal{F}_\mu$ is specified by setting
	\begin{align}
		\bigl( \Pi_\mu (g) f \bigr) (g') & \doteq \sqrt{\lambda_g(g'H)} \; f (g^{-1} g') \; , \qquad g, g' \in G \; , 
	\label{indrep}
	\end{align}
with $\lambda_g$ the Radon-Nikodym derivative defined in \eqref{lambda-function}. 
\end{definition}

\begin{remark}
Note that while $\Pi_\mu$ depends on $\mu$, 
its unitary equivalence class depends only on $\pi$.
\end{remark}

The induced representation is equivalent to a representation on
$C_0(G/H,\mathfrak{H})$, as follows. Let $\Xi \colon G/H \to G$ be a smooth global section. Then $\mathcal{F}_0$
is, as a linear space, isomorphic to $C_0(G/H,\mathfrak{H})$ by identifying $f\in\mathcal{F}_0$ with 
$\tilde f\in C_0(G/H,\mathfrak{H})$ defined by\footnote{The inverse map 
$h\in C_0(G/H,\mathfrak{H}) \mapsto \check{h} \in \mathcal{F}_0$
is given by 
	\[
		\check{h}(g) \doteq \pi\big(g^{-1}\Xi(p)\big)\, h(p) \; , \quad p\doteq \Pi(g) \; .
	\]
}
	\begin{equation} 
		\label{eqFeqFGH}
			\tilde{f}(p) \doteq f\big(\Xi(p)\big) \; ,  \quad p\in G/H \; .
	\end{equation} 
The scalar product~\eqref{Spr} in $\mathcal{F}_0$ goes over, under
this equivalence, into the $L^2$-product in $C_0(G/H;\mathfrak{H})$: 
	\begin{equation} 
		\label{eqScalProdGH} 
		\| f\|^2_\mu = \|\tilde f\|^2_{L^2(G/H;\mathfrak{H})} = \int_{G/H} {\rm d} \mu(gH)\, \|f(gH)\|^2_{\mathfrak{H}}. 
	\end{equation}
Further, for $g\in G$ and $p\in G/H$, the group elements $g^{-1} \Xi(p)$
and $\Xi(g^{-1}\cdot p)$\footnote{We denote the action of $G$ in $G/H$
by a dot, $g\cdot (gH)\equiv (gg')H$.} differ by an element in $H$, the so-called Wigner
rotation $\Omega(g,p)$: 
	\begin{equation} 
		\label{eqWignerRot}
		g^{-1} \Xi(p) = \Xi(g^{-1}\cdot p) \, \Omega(g,p)^{-1} \; ,
		\quad 
 		\Omega(g,p) \doteq \Xi(p)\,g \, \Xi(g^{-1}\cdot p) \, \in H \; . 
	\end{equation}
Using this fact, the induced representation~\eqref{indrep} on
$\mathcal{F}_0$ is equivalent, via the isomorphism~\eqref{eqFeqFGH},
to the representation $ \widetilde{\Pi}_\mu$ defined on $C_0(G/H,\mathfrak{H})$ by  
	\begin{equation} 
		\label{eqRepGH}
		\bigl( \widetilde{\Pi}_\mu (g) h \bigr) (p) \doteq 
		\lambda_g\big(\Xi(p)H\big)^{\frac{1}{2}} \; \pi\big(\Omega(g,p)\big)\; h(g^{-1}\cdot p) \; .
	\end{equation}
As the isomorphism~\eqref{eqFeqFGH} intertwines the respective
scalar products, an \hbox{(anti-)} unitary operator in $\mathcal{F}_0$ goes
over into an (anti-) unitary operator in $L^2(G/H; \mathfrak{H})$. Thus, 
if the representation $\pi$ of $H$ in $\mathfrak{H}$ is unitary, then 
the representation $\widetilde{\Pi}_\mu$ is unitary in
$L^2(G/H; \mathfrak{H})$.

\section{Induced representations for $SO_0(1,2)$}

A representation $\pi_{\nu}^{\pm} \colon MAN \to \mathbb{C}$
of the closed solvable subgroup $MAN$ of the  two-fold covering group of $SO_0(1,2)$ 
on $\mathbb{C}$ is defined by lifting a character $\chi_\nu$
of $A$ to $AN$, and taking its product with a representation of $M$: set 
$\pi_{\nu}^{\pm} =  (\sigma_\pm \otimes \chi_\nu \otimes \mathbb{1})$, where
	\begin{equation}
	\label{indrepc} 
		\sigma_\pm \otimes \chi_\nu \otimes \mathbb{1} \colon \;   man 
		\mapsto \chi_\nu (a  )\sigma_\pm (m) \; , 
	\end{equation} 
with $\sigma_\pm (\mathbb{1}_3)= 1$, 
	\[ 
		\sigma_\pm    \begin{pmatrix}
					1 &  0 &0 \\
					0 &  - 1 & 0  \\
					0  & 0 & - 1   
				\end{pmatrix}  = \pm 1 \;  ,  \quad \text{and} \quad	\chi_\nu
								\begin{pmatrix}
					\cosh t &  0 &\sinh t \\
					0 &  1&0  \\
					\sinh t & 0 & \cosh t   
 					\end{pmatrix} 
 			= {\rm e}^{i \nu t} \; . 
		\]
Thus
	\[
		\pi_\nu^\pm\big(P^k\Lambda_1(t)D(q)\big)\doteq(\pm 1)^ke^{i\nu t} \; . 
	\]
The induced representation is initially defined on the space\footnote{As we have seen in 
Section \ref{circle-mass-shell}, the cosets $\{gMAN \in g \in G \}$ can be identified either with a 
circle on the forward light cone (using the Iwasawa decomposition of the group) or with a pair of mass-hyperbolas 
on the forward light cone (using the Hannabus decomposition of $SO_0(1,2)$).} 
(see De\-fini\-tion~\ref{Fmu})
	\[ 
		{\mathfrak h}_{\nu,0}^\pm \doteq \left\{ f \in C (G, \mathbb{C}) \mid \mathbb{\Gamma} ({\rm supp} \, f) \;
		\text{is compact,} \; f(gh)= \pi_{\nu}^{\pm} (h^{-1}) f(g) \; \text{for} \; h \in MAN \right\}   .  
	\] 
This definition implies that 
\begin{itemize}
\item [$i.)$] a function $f \in {\mathfrak h}_{\nu,0}^+$ depends only on
  the cosets $gMN$, $g \in G$, as	\begin{equation}
	\label{fcoset}
		f (gmn) = f(g) \qquad \forall g \in G \; , \; \forall m \in M \; , \; \forall n \in N \; ;
	\end{equation}
\item [$ii.)$] if $f$ is a function in ${\mathfrak h}_{\nu,0}^+$ or ${\mathfrak h}_{\nu,0}^-$, then
	\[
		f \bigl( g \Lambda_1(t)D(q) \bigr)  = p_0^{i \nu} f (g) \; , \qquad \text{ with } p_0 = {\rm e}^{-t} >0 \; ; 
	\]
\item [$iii.)$] in case $\nu \in \mathbb{R}$, the representation $\pi_{\nu}^{\pm} $ is unitary; 
\item [$iv.)$] in case $\nu$ is purely imaginary, 
the representation \eqref{indrepc} is {\em no longer} a unitary representation 
of $MAN$ in~$\mathbb{C}$. However, \eqref{indrepc} implies that 
	\begin{equation}
	\label{Compseries}
	\int_G {\rm d} g \, | f(gh) |^2 = \int_G {\rm d} g \,  \bigl(\pi_{\nu}^{\pm} (h) f\bigr)(g) 
	\bigl(\pi_{-\nu}^{\pm} (h) f\bigr)(g) \; \qquad \forall h \in MAN \; .  
	\end{equation}
\end{itemize}
We will explore these facts further in the next subsection. 

Let us denote by $\Pi_\nu^{\pm}$ the representation of $G$ induced from the  
representation~$\pi_{\nu}^{\pm}$ of the closed subgroup $MAN$, 
	\begin{equation} 
		\label{eqIndRepSO12}
		\bigl( \Pi_\nu^{\pm}(g)f \bigr)(g')=\sqrt{\lambda_g(g'MAN)}\,f(g^{-1}g') \; , \quad f\in 	{\mathfrak h}_{\nu,0}^\pm \; .
	\end{equation}

\begin{remark} To compute explicit expressions for 
the representation~\eqref{eqIndRepSO12}
(for specific choices of $g \in G$)
one can take advantage of \eqref{fcoset}. According to 
Lemma \ref{repV+} the map 
	\[
		\mathbb{\Gamma} (g MN) \doteq    g 	\left( \begin{smallmatrix}
							1 \\
							0 \\
							-1
						\end{smallmatrix} \right) \; , \qquad g \in G \; ,  
	\]
defines a bijection, which identifies the homogeneous space 
	\[
		\{gMN \mid g \in SO_0(1,2) \}= \{ R_0(\alpha)  \Lambda_{1}(t)MN \mid \alpha \in [0, 2\pi), t \in \mathbb{R}  \}
	\]
with the forward light cone
	\begin{equation}
	\label{X-lightcone}
		\partial V^+  \cong \left\{ (\alpha, {\rm e}^{-t}) \in S^1 \times \mathbb{R}^+  \mid \alpha \in [0, 2\pi), t \in \mathbb{R} \right\} \; . 
	\end{equation}
Setting $p_0 = {\rm e}^{-t}$, 
the action of $SO_0(1,2)$ on the forward light cone
$\partial V^+$ is given by~\eqref{lambda2-s}, \emph{i.e.},
\label{umLambdapage}
	\begin{align}
		\label{udrei} 
 			\Lambda_2 (s)^{-1}  (\alpha', p_0') 
			& =  \bigl( \alpha_2 \,  , \,  p_0' (\cosh s - \sinh s \sin \alpha') \bigr)
	\nonumber \\  
			\Lambda_1 (t)^{-1} (\alpha', p_0')  
		&= \bigl( \alpha_1  \,  , \,  p_0' (\cosh t - \sinh t \cos \alpha') \bigr) 
	 \nonumber \\
			R_{0}^{-1} (\alpha)   (\alpha', p_0')  
		&=  (\alpha' + \alpha \,  , \, p_0')     \; , 
	 \nonumber \\
			P  (\alpha', p_0')  
		&
		= (\alpha'  + \pi \,  , \, p_0')     \; , 
	\end{align}
with
	\begin{align*}
	( \sin \alpha_2 , \cos \alpha_2)  & = \left( \tfrac{-\sinh s + \cosh s \sin \alpha'}
	{\cosh s -  \sinh s \sin \alpha'} \; , 
	  \tfrac{\cos \alpha'}{\cosh s -  \sinh s \sin \alpha'} \right) \; , \\
	(\sin \alpha_1 , \cos \alpha_1)  &= 
	\left( \tfrac{\sin \alpha'}{\cosh t - \sinh t \cos \alpha'} ,  \tfrac{-\sinh t 
	+ \cosh t \sin \alpha'}{\cosh t - \sinh t \cos \alpha'} \right) \; .
	\end{align*}
\end{remark}

\begin{lemma} 
The restriction of the Lorentz invariant measure on $\mathbb{R}^{1+2}$ to the forward 
light-cone~\eqref{X-lightcone}, given by
	\begin{equation}
	\label{measure-lightcone}
		{\rm d} \mu (\alpha, p_0) \doteq  \tfrac{{\rm d} \alpha}{2 \pi}  {\rm d} p_0 \; , 
		\qquad p_0 = {\rm e}^{-t} \; , 
	\end{equation}
defines a strongly quasi-invariant measure on the homogeneous space $G/MN \cong
 \{ gMN \mid g \in SO_0(1,2) \}$. Its Radon--Nikodym derivative is
	\begin{align}
	\label{RN}
	 	 \frac{{\rm d} \mu^g}  
		{ {\rm d} \mu } (g' MN) =  \chi_{i} \left( \Lambda_1(t )\right) 		\; , 
		\qquad  \text{with} \quad g^{-1}g' =  R_0( 2 \alpha) P^k \Lambda_{1}(t) D(q) \; .     
	\end{align}
\end{lemma}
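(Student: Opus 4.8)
The strategy is to read the claim off from the general theory of $\rho$-functions and strongly quasi-invariant measures recorded in Theorem \ref{cocycle-relation}, applied to the pair $(G,MN)$. First I would fix the identification: by Lemma \ref{repV+} the map $\mathbb{\Gamma}(gMN) = g\,(1,0,-1)^{T}$ is a bijection of $G/MN$ onto $\partial V^{+}\setminus\{(0,0,0)\}$, and writing $g$ in the Iwasawa form $g = R_{0}(\alpha)P^{k}\Lambda_{1}(t)D(q)$ of Lemma \ref{iwa} shows that $gMN$ corresponds to the light-cone point with angular coordinate $\alpha$ and $p_{0}={\rm e}^{-t}$; cf.\ \eqref{X-lightcone}. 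Hence $\tfrac{{\rm d}\alpha}{2\pi}{\rm d}p_{0}$ is a well-defined regular Borel measure on $G/MN$, and up to normalisation it is the restriction $|p_{0}|^{-1}{\rm d}p_{1}{\rm d}p_{2}$ of the Lorentz-invariant measure on $\mathbb{R}^{1+2}$ to the forward light cone, which is what makes it (quasi-)invariant.

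Next I would produce the relevant $\rho$-function. From Remark \ref{rhocomputed} one has $\rho\bigl(\Lambda_{1}(t)D(q)\bigr)={\rm e}^{-t}$ and from Remarks \ref{rhot} that $\rho\bigl(R_{0}(\alpha)\bigr)=1$; together with the defining rule $\rho(gh)=\tfrac{\Delta_{MN}(h)}{\Delta_{G}(h)}\rho(g)$ and $\Delta_{G}\equiv 1$ (unimodularity of $G$) this determines $\rho$ on all of $G$ from the Iwasawa coordinates. I would then check, using the Iwasawa form ${\rm d}g=\tfrac{{\rm d}\alpha}{2\pi}{\rm e}^{t}{\rm d}t\,{\rm d}q$ of the Haar measure and the substitution $p_{0}={\rm e}^{-t}$, that the strongly quasi-invariant measure attached to $\rho$ by \eqref{GHmeasurea} is precisely a constant multiple of $\tfrac{{\rm d}\alpha}{2\pi}{\rm d}p_{0}$ — the point being that forming $f\mapsto f^{MN}$ integrates out exactly the $M$- and $N$-variables and leaves a function of $(\alpha,p_{0})$. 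Granting this, parts $(i)$–$(ii)$ of Theorem \ref{cocycle-relation} immediately yield the asserted strong quasi-invariance.

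Finally, for the Radon–Nikodym derivative I would invoke Theorem \ref{cocycle-relation}$(iii)$, which gives $\lambda_{g}(g'H)=\rho(gg')/\rho(g')$, and evaluate it along the section $\Xi\colon G/MN\to G$ furnished by the rotation part of the Iwasawa decomposition, for which $\rho\bigl(\Xi(g'MN)\bigr)=1$. This collapses the derivative to $\rho$ of the $A$-component of the Iwasawa decomposition of the relevant group element; writing that element as $g^{-1}g'=R_{0}(2\alpha)P^{k}\Lambda_{1}(t)D(q)$ one reads off $\lambda_{g}(g'MN)=\rho\bigl(\Lambda_{1}(t)\bigr)={\rm e}^{-t}=\chi_{i}\bigl(\Lambda_{1}(t)\bigr)$, which is the formula in \eqref{RN}.

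The part I expect to be genuinely delicate is not conceptual but a matter of bookkeeping and conventions: one must be scrupulous about which group element ($gg'$ versus $g^{-1}g'$, and whether $\rho$ or $\rho^{-1}$) actually enters the Radon–Nikodym derivative once the convention $\mu^{g}(\,\cdot\,)=\mu(g\,\cdot\,)$ and the choice of section have been fixed, and about the interplay of the Jacobian ${\rm e}^{-t}$ relating ${\rm d}t$ to ${\rm d}p_{0}$ with the weight ${\rm e}^{t}$ coming from the non-unimodularity of $AN$ (Remark \ref{notunimodular}); these factors must be tracked consistently in order to land on the normalisation $\tfrac{{\rm d}\alpha}{2\pi}{\rm d}p_{0}$ and the character $\chi_{i}$ exactly. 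Beyond that, everything reduces to Theorem \ref{cocycle-relation} and the explicit action \eqref{udrei} of $SO_{0}(1,2)$ on the light cone.
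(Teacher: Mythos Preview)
Your approach is essentially the same as the paper's: both extract the Radon--Nikodym derivative from the $\rho$-function framework of Theorem~\ref{cocycle-relation} and Remark~\ref{rhot}, you by invoking the theorem abstractly after identifying the measure via \eqref{GHmeasurea}, the paper by a direct change-of-variables computation that cites Remark~\ref{rhot}\,$ii.)$ for the key Jacobian. The bookkeeping caveat you flag about $gg'$ versus $g^{-1}g'$ is exactly where the paper's argument is most compressed.
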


\begin{proof} 
Set $ g^{-1}g' =  R_0(2 \alpha) P^k \Lambda_{1}(t) D(q)$ and let $f$ be a measurable function 
on~\eqref{X-lightcone}. Then 
	\begin{align*}
		\int {\rm d} \mu^g (\alpha', p_0') \;  f(\alpha', p_0') & = \int {\rm d} \mu (\alpha', p_0') \; f(\alpha, p_0) \\
		& =  \int {\rm d} \mu (\alpha, p_0) \;   p_0^{-1}   f(\alpha, p_0) \; .     
	\end{align*}
The second identity follows from Remark \ref{rhot} $ii.)$. Note that $\chi_{i} \left( \Lambda_1(t )\right) = {\rm e}^{-t}$.
Thus \eqref{RN} follows.
As expected, $\lambda_g (g'H)$ satisfies the cocycle relation stated in Proposition \ref{cocycle-relation}.
This result is in agreement with \cite[p.169, 170] {Knapp}.
\end{proof}

\section{Unitary representations on a circle on the light cone}
\label{UIRc}

The Iwasawa decomposition together with the definition of ${\mathfrak h}_{\nu,0}^\pm$ 
imply that a function $f \in {\mathfrak h}_{\nu,0}^\pm$ is determined 
by the restriction $f_{\upharpoonrightÊK}\, $ of $f$ to $K$. We have seen that $\{ gMN \mid g \in G \}$ can be identified 
with $\partial V^+$, while $\{ gMAN \mid g \in G \}$ can be identified with the projective space formed by the light rays on the 
forward light cone, see Subsection \ref{circ-sub}.

The latter can be identified with the subgroup $SO(2)$ of $G$,
considered as a topological space, and we have 
	\[
		G/MAN \cong SO(2) \; .
	\]      
This can be also directly seen by considering the unique Iwasawa decomposition
$G=KAN = SO(2)\, MAN$. The projection $G\to G/MAN$ is then given by 
	\[
		R_0(2 \alpha)P^k \Lambda_1(t)D(q)\mapsto R_0(2 \alpha) \; ,  \qquad \alpha \in [0, \pi) \; , 
	\]
and the embedding of $SO(2)$ into $G$ can be considered as a global
smooth section  
	\[ 
		\Xi \colon G/MAN \to G,\quad R_0(\alpha) \mapsto R_0(\alpha) \; .
	\]
We wish to translate the induced represention~\eqref{eqIndRepSO12} as
in Eq.~\eqref{eqRepGH} to a representation acting  on $C_0(SO(2))$. 
For given $g\in G$ and $R_0(\alpha')\in G/MAN\cong SO(2)$ there are unique
$\alpha,k,t$ and  $q$ such that
	\begin{equation}
		\label{aktq}
		g^{-1} R_0(\alpha')= R_0 (\alpha) \; P^k \Lambda_1(t) D(q)\; .  
	\end{equation}
Taking the class w.r.t.\ $MAN$, this implies that  $g^{-1} R_0(\alpha')=R_0(\alpha)$  in the
sense of the action of $G$ on $G/MAN\cong SO(2)$
and that $\Xi \bigl( g^{-1} R_0(\alpha')\bigr) = R_0(\alpha)\in G$.  Eq.~\eqref{aktq} then implies that 
$P^k \Lambda_1(t) D(q)=\Omega\big(g,R_0(\alpha')\big)^{-1}$, see \eqref{eqWignerRot}.

Let us denote by $\widetilde{\Pi}_\nu^{\pm}$ the representation 
living on $C(SO(2))$ equivalent to the induced
representation ${\Pi}_\nu^{\pm}$~\eqref{eqIndRepSO12}.
According to \eqref{eqRepGH}, it acts as   
	\begin{align*} 
		\big(\widetilde{\Pi}_\nu^{\pm}(g) f \big)_{\upharpoonrightÊK} (R_0(\alpha')) & = 
			\sqrt{\lambda_g \bigl( R_0 (\alpha') MAN \bigr)} \;
			\pi_\nu^{\pm}\big(\Omega(g,R_0(\alpha')\big) \; f_{\upharpoonrightÊK} (g^{-1} \cdot R_0(\alpha')) \\
		&= {\rm e}^{-\frac{1}{2}t} \;\pi_\nu^{\pm}\big(P^k \Lambda_1(t)
			D(q)\big)^{-1} \; f_{\upharpoonrightÊK} (R_0(\alpha)) \\
		&= {\rm e}^{(-\frac{1}{2}-i\nu)  t}	(\pm 1)^{k } f_{\upharpoonrightÊK} \left( R_0 ( \alpha )\right)  \; . 
\end{align*}
We have used $ \lambda_g ( R_0 (\alpha') MAN ) = \rho (g) = {\rm e}^{- t}$ and 
	\[
		\pi_\nu^{\pm}\big(P^k \Lambda_1(t) D(q)\big)^{-1}=e^{-i\nu  t}(\pm 1)^k \; , 
	\]
as well as $g^{-1}\cdot R_0(\alpha')=R_0(\alpha)$. 
Note that if $\nu \in \mathbb{R}$, then $\pi_\nu^\pm$ \eqref{indrepc} is a {\em unitary} 
representation of $MAN$ in~$\mathbb{C}$. 

Identifying $SO(2)$ with the circle $\Gamma_0$ 
introduced in \eqref{gamma-0} by setting 
	\[
	h (\alpha) \doteq f_{\upharpoonrightÊK} (R_0 (\alpha)) \; , \qquad \alpha \in [0, 2 \pi) \; , 
	\]
the representation $\widetilde{\Pi}_\nu^{\pm}$ extends to a unitary representation on $L^2(\Gamma_0,{\rm d}\mu_{\Gamma_0})$, with
${\rm d}\mu_{\Gamma_0}= \tfrac{{\rm d} \alpha}{2 \pi}$ the strongly quasi-invariant 
measure on $G/MAN \cong \Gamma_0$;  see the remark after Eq.~\eqref{eqRepGH}. 

\begin{proposition}
\label{Prop:2.1}
Let $\widetilde {\mathfrak h}_\nu^\pm$ denote the completion of $C_0(\Gamma_0)$     
with respect to one of the following norms:
\begin{itemize}
\item[$i.)$]
in case $0 < \zeta < \frac{1}{2}$, define for $\nu =  \pm i \sqrt{\frac{1}{4} - \zeta^2}$ a norm on ${\mathfrak h}_{\nu,0}^\pm$ by setting
	\[ 
		\| h \|_\nu^2 \doteq 
			\int_{\Gamma_0} \frac{{\rm d} \alpha}
			{2 \pi}  \; 
			\overline{  h (\alpha) }  \int_{\Gamma_0} \frac{{\rm d} \alpha'}
			{2 \pi} \; \varrho_\nu (\alpha - \alpha') \,   h ( \alpha') \; ,
	\]
with $\varrho_\nu (\alpha) 
		\doteq \frac{\Gamma (\frac{1}{2} - i \nu )}{\Gamma ( \frac{1}{2})
		\Gamma ( - i \nu )} \; \left(\sin \tfrac{\alpha}{2} \right)^{-\frac{1}{2}  - i \nu } 	\pi \; $;
\item[$ii.)$]
in case $\frac{1}{2} \le \zeta$, define for $\nu = \pm \sqrt{\zeta^2 -
  \frac{1}{4}}$ a norm on ${\mathfrak h}_{\nu,0}^\pm$ by setting
	\[ 
		\| h \|^2 \doteq 
			\frac{1}{2 \pi} \int_{\Gamma_0} {\rm d}   \alpha \;  | h ( \alpha)|^2 \; . 
	\]
\end{itemize}
It follows that for all $\zeta>0$ the operators $\Pi_{\nu}^{\pm} (g)
$, $g \in G$, extend from $C_0(\Gamma_0)$  
to a unitary representation
	\begin{align}
		\left( \widetilde u_{\nu}^{\pm} (g) h \right) (\alpha')  
		& = ( \pm 1 )^{k } {\rm e}^{( - \frac{1}{2} - i\nu) t }
		h \left( \alpha \right)  \; 
	\label{FG1}
	\end{align}
of the two fold covering group of the Lorentz group $SO_0(1,2)$. 
The parameters $\alpha, k, t, q$ on the r.h.s.~ 
are given by \eqref{aktq}.
\end{proposition}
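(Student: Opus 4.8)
The plan is to recognise \eqref{FG1} as the general induced‑representation formula \eqref{eqRepGH} specialised to $G=$ the two‑fold covering group of $SO_0(1,2)$, $H=MAN$, $\mathfrak H=\mathbb C$ and $\pi=\pi_\nu^\pm$, with the global section $\Xi\colon R_0(\alpha)\mapsto R_0(\alpha)$ and the data $\lambda_g(R_0(\alpha')MAN)=\rho(g)={\rm e}^{-t}$ and $\Omega(g,R_0(\alpha'))=\bigl(P^k\Lambda_1(t)D(q)\bigr)^{-1}$ read off from \eqref{aktq}, exactly as in the discussion preceding the Proposition. Hence the operators $\widetilde u_\nu^\pm(g)$, $g\in G$, obey the representation law on $C_0(\Gamma_0)$ by the cocycle relation of Theorem \ref{cocycle-relation}$(iv)$ together with multiplicativity of $\pi_\nu^\pm$, and $g\mapsto\widetilde u_\nu^\pm(g)h$ is sup‑norm continuous for fixed $h\in C_0(\Gamma_0)$ because $(g,\alpha')\mapsto(\alpha,t)$ is continuous. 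It then remains, for each $\zeta>0$, to exhibit a Hilbert completion of $C_0(\Gamma_0)$ on which these operators act unitarily; strong continuity on the completion follows from the uniform bound $\|\widetilde u_\nu^\pm(g)\|=1$ by a standard three‑$\varepsilon$ argument.

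For the principal series, $\zeta\ge\tfrac12$ and $\nu=\pm\sqrt{\zeta^2-\tfrac14}\in\mathbb R$, the character $\pi_\nu^\pm$ of $MAN$ is \emph{unitary}, so the final observation of the general theory of induced representations applies: the induced representation is unitary on $L^2(G/MAN;\mathbb C)\cong L^2(\Gamma_0,\tfrac{{\rm d}\alpha}{2\pi})$. Concretely, $|(\widetilde u_\nu^\pm(g)h)(\alpha')|^2={\rm e}^{-t}\,|h(\alpha)|^2$ since $|{\rm e}^{-i\nu t}|=|(\pm1)^k|=1$, and because ${\rm e}^{-t}=\lambda_g$ is the Radon--Nikodym derivative of the strongly quasi‑invariant measure $\tfrac{{\rm d}\alpha}{2\pi}$ (Remarks \ref{rhot} and the Lemma preceding Section \ref{UIRc}), the substitution $\alpha'\mapsto\alpha$ gives $\|\widetilde u_\nu^\pm(g)h\|_{L^2}=\|h\|_{L^2}$. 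Thus $\widetilde u_\nu^\pm$ extends to a unitary representation on $\widetilde{\mathfrak h}_\nu^\pm=L^2(\Gamma_0,\tfrac{{\rm d}\alpha}{2\pi})$, which is case $(ii)$.

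For the complementary series, $0<\zeta<\tfrac12$ and $\nu=\pm i\sqrt{\tfrac14-\zeta^2}$, so $i\nu\in(-\tfrac12,\tfrac12)\setminus\{0\}$ is \emph{real}. Now $|{\rm e}^{-i\nu t}|\neq1$ and the $L^2$‑norm is not invariant; introduce the convolution operator $A_\nu$ on $\Gamma_0$ with kernel $\varrho_\nu$, so that $\langle h_1,h_2\rangle_\nu\doteq\langle h_1,A_\nu h_2\rangle_{L^2({\rm d}\alpha/2\pi)}$ polarises $\|\,\cdot\,\|_\nu^2$. Two facts are needed. First, $A_\nu$ intertwines the two parameters, $A_\nu\,\widetilde u_\nu^\pm(g)=\widetilde u_{-\nu}^\pm(g)\,A_\nu$ for all $g\in G$: since $A_\nu$ commutes with the rotation subgroup it is diagonal in the Fourier basis $\{{\rm e}^{in\alpha}\}$ with eigenvalue $\widehat\varrho_\nu(n)=\int_0^{2\pi}\tfrac{{\rm d}\alpha}{2\pi}{\rm e}^{-in\alpha}\varrho_\nu(\alpha)$, a ratio of Gamma functions obtained from the classical Beta integral for $(\sin\tfrac\alpha2)^{-1/2-i\nu}$; one checks these eigenvalues realise the standard (Knapp--Stein) intertwining operator between the degenerate principal series of parameters $\nu$ and $-\nu$, which yields the intertwining relation for the boosts $\Lambda_1(t)$ as well (equivalently, lifting to degree‑$(-\tfrac12-i\nu)$ homogeneous functions on $\partial V^+$, $A_\nu$ becomes integration against the Lorentz‑invariant Harish‑Chandra kernel $(p\cdot q)^{-1/2-i\nu}$, whose covariance under \eqref{udrei} is geometric; see \cite[Ch.~9]{Vil}, \cite{Ba}, \cite{T}). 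Second, $\varrho_\nu$ is real and even on the circle, so the $\widehat\varrho_\nu(n)$ are real, and for $i\nu\in(-\tfrac12,\tfrac12)$ the Gamma‑ratio is strictly positive for every $n\in\mathbb Z$ (the normalisation constant $\tfrac{\pi\,\Gamma(1/2-i\nu)}{\Gamma(1/2)\,\Gamma(-i\nu)}$ being chosen to make the leading eigenvalue clean); hence $\|h\|_\nu^2=\sum_{n}\widehat\varrho_\nu(n)\,|\widehat h(n)|^2>0$ for $h\neq0$, so $\langle\,\cdot\,,\,\cdot\,\rangle_\nu$ is a genuine inner product and $\widetilde{\mathfrak h}_\nu^\pm$ is a Hilbert space.

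Granting these, unitarity in case $(i)$ follows by bookkeeping. Because $i\nu$ is real, $\overline{{\rm e}^{(-1/2-i\nu)t}}\,{\rm e}^{(-1/2+i\nu)t}={\rm e}^{-t}$ and $\bigl((\pm1)^k\bigr)^2=1$, whence the pointwise identity
\[
\overline{\bigl(\widetilde u_\nu^\pm(g)f_1\bigr)(\alpha')}\;\bigl(\widetilde u_{-\nu}^\pm(g)f_2\bigr)(\alpha')={\rm e}^{-t}\,\overline{f_1(\alpha)}\,f_2(\alpha)\,;
\]
integrating against $\tfrac{{\rm d}\alpha'}{2\pi}$, using ${\rm e}^{-t}=\lambda_g$ and the substitution $\alpha'\mapsto\alpha$, gives $\langle\widetilde u_\nu^\pm(g)f_1,\widetilde u_{-\nu}^\pm(g)f_2\rangle_{L^2}=\langle f_1,f_2\rangle_{L^2}$ (compare \eqref{Compseries}). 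Combining this with the intertwining relation,
\begin{align*}
\langle\widetilde u_\nu^\pm(g)h_1,\widetilde u_\nu^\pm(g)h_2\rangle_\nu
&=\langle\widetilde u_\nu^\pm(g)h_1,\widetilde u_{-\nu}^\pm(g)A_\nu h_2\rangle_{L^2} \\
&=\langle h_1,A_\nu h_2\rangle_{L^2}=\langle h_1,h_2\rangle_\nu ,
\end{align*}
so $\widetilde u_\nu^\pm(g)$ is a $\langle\,\cdot\,,\,\cdot\,\rangle_\nu$‑isometry of $C_0(\Gamma_0)$; since $\widetilde u_\nu^\pm(g)\widetilde u_\nu^\pm(g^{-1})=\mathbb1$ it extends to a unitary operator on $\widetilde{\mathfrak h}_\nu^\pm$, finishing case $(i)$. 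The one genuinely computational ingredient — and the step I expect to be the main obstacle — is the explicit Gamma‑function evaluation of $\widehat\varrho_\nu(n)$ together with the verification that it is simultaneously the spectrum of the Knapp--Stein intertwiner (so that the covariance holds for the boosts, not merely the rotations) and strictly positive throughout $0<\zeta<\tfrac12$; once that is in hand, everything else rests on the induced‑representation machinery already assembled above.
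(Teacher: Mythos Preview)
Your approach is essentially the same as the paper's. Both handle the principal series by the standard unitarity of the induced representation on $L^2(\Gamma_0,\tfrac{{\rm d}\alpha}{2\pi})$, and for the complementary series introduce the convolution operator $A_\nu$ with kernel $\varrho_\nu$, invoke its intertwining property between the parameters $\nu$ and $-\nu$, and combine this with the cross-unitarity $\overline{\pi_\nu^\pm}\,\pi_{-\nu}^\pm=1$ (your pointwise identity) to obtain invariance of $\langle\,\cdot\,,A_\nu\,\cdot\,\rangle_{L^2}$. The only notable difference is in how the intertwining relation itself is justified: the paper defers to a later proposition where it is verified by a direct, explicit computation on the generators $R_0(\alpha)$ and $\Lambda_2(s)$ (checking that the kernel identity $(1-\cos(\beta-\beta'(s,\alpha)))^{s^+}$ factors correctly), whereas you argue via the Fourier diagonalisation of $A_\nu$ and its identification with the Knapp--Stein intertwiner, citing the literature. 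Your route is perhaps more conceptual but less self-contained; the paper's is computational but explicit. One small caution: the paper states the intertwining as $A_\nu\widetilde u_{-\nu}^\pm=\widetilde u_\nu^\pm A_\nu$, i.e., the opposite direction from your $A_\nu\widetilde u_\nu^\pm=\widetilde u_{-\nu}^\pm A_\nu$; since the proposition explicitly allows both signs $\nu=\pm i\sqrt{\tfrac14-\zeta^2}$ and the two give equivalent representations, this is a harmless labelling ambiguity, but it is worth keeping the conventions straight when you carry out the Fourier computation.
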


\begin{proof}
The case $\nu \in \mathbb{R}$ follows from the discussion preceding the proposition. 
In the case $-\tfrac{1}{2} < i \nu < \tfrac{1}{2}$, note that the 
norm reads 
	\[
		\| h\|_\nu^2 = \langle h,A_\nu h\rangle_{L^2(\Gamma_0)} \; , 
	\]
where $A_\nu$ is the operator acting on $C_0(\Gamma_0)$ 
as  
	\[ 
		(A_\nu h) (\alpha) \doteq  \int_{\Gamma_0} \frac{ {\rm d} \alpha' }{2 \pi}  \; 
        			\varrho_\nu (\alpha-\alpha') h(\alpha') \; , 
			\qquad \alpha \in [0, 2 \pi) \; . 
	\]
We show below that this map intertwines $\widetilde{\Pi}_{-\nu}^{\pm}$
and $\widetilde{\Pi}_{\nu}^{\pm}$; see \eqref{intertwiner}. 
Using this fact and the fact that $\overline{\pi_\nu^{\pm}(man)}\,
\pi_{-\nu}^{\pm}(man) =1$ for all $man \in MAN$, one verifies that 
$\widetilde\Pi_{\nu}^{\pm} (g) $, $g \in G$, is a unitary operator in
${\mathfrak h}_{\nu,0}^\pm$.  
\end{proof}

\goodbreak
\begin{remarks} 
\quad
\begin{itemize}
\item[$i.)$]
Note that in case $\frac{1}{2} \le \zeta$, the norm does not depend on $\nu$. 
\item[$ii.)$]
In Bargmann's classification \cite{Ba} of the unitary irreducible representations of $SO_0(1, 2)$, the 
\emph{principle series}\index{principle series} and the \emph{complementary series}\index{complementary series} 
are both denoted by $C_{\zeta^2}^0$. They are distinguished by the eigenvalue of $\zeta^2$ of the Casimir 
operator $C^2$, with $\zeta^2$ being larger or equal {\em or} smaller than $1/4$.  
\end{itemize}
\end{remarks}

Choosing $p_0 = 1$ in \eqref{udrei} and using the notation introduced in \eqref{aktq}, one finds 
(see Equ.~(4.41) and Equ.~(4.42) in \cite{BM})
	\begin{align}
		\label{udrei23} 
 			\bigl( \widetilde  u_{\nu}^{\pm} (\Lambda_2 (s))h \bigr)  (\alpha') 
		&= {\rm e}^{( - \frac{1}{2} - i\nu)  t_2 } h (  \alpha_2 ) 
	\nonumber \\  
			\bigl( \widetilde u_{\nu}^{\pm} (\Lambda_1 (t))h \bigr)   (\alpha')  
		&= {\rm e}^{( - \frac{1}{2} - i\nu)  t_1 }  
		h ( \alpha_1 ) 
	 \nonumber \\
			\bigl( \widetilde u_{\nu}^{\pm} (R_{0}(\alpha))h \bigr)   (\alpha')  
		&= h (\alpha + \alpha' )  \; , 
	\end{align}
with
	\begin{align*}
		t_2   & = \ln (\cosh s - \sinh s \sin \alpha')  \; , 
		\\
		 t_1   & = \ln (\cosh t - \sinh t \cos \alpha') \; , 		  
	\end{align*}
and
	\begin{align*}
		{\rm e}^{i \alpha_1} & 
		=  \tfrac{\cos \alpha' -i \sinh s + i \cosh s \sin \alpha'}{\cosh s -  \sinh s \sin \alpha'}  \; , 
		\\
		{\rm e}^{i \alpha_2} & 
		=  \tfrac{-\sinh t + \cosh t \sin \alpha' +i \sin \alpha' }{\cosh t  +  \sinh t \cos \alpha'}  \; . 
	\end{align*}

\begin{theorem}[Bargmann, \cite{Ba}]
\label{TH-irr}
The representations $\widetilde u_{\nu}^{\pm}$ given by \eqref{FG1} are irreducible. 
The representations for $\nu$ and $- \nu$, $\nu \in \mathbb{R}$, are unitarily equivalent
both for the principal and the complementary series\footnote{See, \emph{e.g.}, 
\cite[p.~104]{Pukanszky}.}. 
\end{theorem}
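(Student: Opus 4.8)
The plan is to establish irreducibility and equivalence separately, working on the model Hilbert space $\widetilde{\mathfrak h}_\nu^\pm$ of functions on the circle $\Gamma_0 \cong SO(2)$ on which the representation \eqref{FG1} acts. For the \textbf{equivalence} of $\widetilde u_\nu^\pm$ and $\widetilde u_{-\nu}^\pm$ (for $\nu\in\mathbb R$, the principal series, and for $\nu$ purely imaginary, the complementary series), I would exhibit the intertwining operator explicitly. In the principal series both representations live on the \emph{same} Hilbert space $L^2(\Gamma_0,\tfrac{{\rm d}\alpha}{2\pi})$, so the intertwiner is a bounded operator $A_\nu$ on $L^2(\Gamma_0)$; it is the convolution operator with kernel $\varrho_\nu(\alpha-\alpha')$ appearing in Proposition \ref{Prop:2.1}. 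First I would verify the intertwining relation $A_\nu\,\widetilde{\Pi}_{-\nu}^{\pm}(g) = \widetilde{\Pi}_{\nu}^{\pm}(g)\,A_\nu$ for all $g\in G$ — it suffices to check it on the three one-parameter subgroups $R_0(\alpha)$, $\Lambda_1(t)$, $\Lambda_2(s)$ using the explicit action \eqref{udrei23}; for the rotations this is immediate since $A_\nu$ is a convolution, and for the boosts it reduces to a classical identity for the kernel $(\sin\tfrac\alpha2)^{-1/2-i\nu}$ under the fractional-linear change of variable $\alpha'\mapsto\alpha_1$ (resp.\ $\alpha_2$), together with the Jacobian factor ${\rm e}^{(-1/2-i\nu)t_1}$. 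This is the homogeneity/covariance property of the Harish–Chandra kernel and is exactly the computation referenced as \cite[p.~104]{Pukanszky}; I would cite it and indicate the key substitution rather than grind through it. In the complementary series the two Hilbert-space norms differ, but the \emph{same} operator $A_\nu$ (now unbounded/positive on $L^2$, with $\|h\|_\nu^2 = \langle h, A_\nu h\rangle_{L^2(\Gamma_0)}$) furnishes a unitary $\widetilde{\mathfrak h}_{-\nu}^\pm \to \widetilde{\mathfrak h}_{\nu}^\pm$: since $A_\nu$ intertwines $\widetilde{\Pi}_{-\nu}^\pm$ with $\widetilde{\Pi}_\nu^\pm$ and defines the inner product, $\|A_\nu^{1/2}h\|_\nu$-type computations give unitarity directly, as already sketched in the proof of Proposition \ref{Prop:2.1}.

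For \textbf{irreducibility}, I would use the infinitesimal method. Passing to the Lie algebra, the generators $K_0, L_1, L_2$ act on $\widetilde{\mathfrak h}_\nu^\pm$ by differentiating \eqref{udrei23}; in particular $K_0$ generates the rotation $R_0(\alpha)$, so on the Fourier basis $e_n(\alpha)={\rm e}^{in\alpha}$ of $L^2(\Gamma_0)$ one has $K_0 e_n = in\,e_n$ (with $n$ running over $\mathbb Z$, or $\mathbb Z+\tfrac12$ for the genuinely two-fold-covering sectors labelled by $\sigma_-$). Computing the raising/lowering operators $L_\pm = L_1 \pm i L_2$ from the boost actions, one finds $L_\pm e_n = c_n^\pm\, e_{n\pm1}$ with coefficients $c_n^\pm$ that, for $\zeta^2 = \nu^2+\tfrac14$ not an exceptional (integer or half-integer) value, are \emph{nowhere zero}. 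Hence starting from any nonzero vector in an invariant closed subspace, projecting onto some $e_n$, and applying $L_\pm$ repeatedly, one reaches every $e_m$; density of the span of $\{e_n\}$ then forces the subspace to be everything. I would phrase this carefully: first reduce to closed invariant subspaces, use that $K_0$ is essentially self-adjoint with simple discrete spectrum so an invariant subspace is spanned by a subset of the $e_n$, then the non-vanishing of $c_n^\pm$ pins down that this subset is all of $\mathbb Z$ (resp.\ $\mathbb Z+\tfrac12$). The genuine content is the non-vanishing of the ladder coefficients, which is where the restriction to $\zeta>0$ (excluding the discrete series and the trivial representation) enters.

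The \textbf{main obstacle} is twofold. First, there is a domain/self-adjointness subtlety: the ladder argument is purely algebraic on the finite-linear-span of the $e_n$, but to conclude irreducibility of the \emph{unitary} representation one must know that this span is a core for the relevant operators and that a closed invariant subspace is genuinely reducing — the standard way out is to invoke that the $e_n$ are joint analytic vectors for the representation (they are, since the group is a real-analytic manifold and the action on $C(\Gamma_0)$ is smooth), so Nelson's theorem applies and the Lie-algebra irreducibility transfers to the group. I would handle this by citing the analytic-vector argument rather than reproving it. Second, in the complementary-series case the inner product is the nonstandard one $\langle\cdot,A_\nu\cdot\rangle$, so one must check that the ladder operators are still densely defined and that the argument is not disturbed by the weighting; since $A_\nu$ is diagonal in the Fourier basis with strictly positive eigenvalues, this is a routine check and the combinatorial heart of the proof is unchanged. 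Finally I would remark that irreducibility together with the equivalence $\widetilde u_\nu^\pm \cong \widetilde u_{-\nu}^\pm$ is precisely Bargmann's original classification statement, so no further work beyond assembling these pieces is needed.
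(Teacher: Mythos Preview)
Your approach is correct and close to the paper's, but there is one worthwhile difference in the irreducibility argument. The paper works via Schur's lemma: it takes a bounded operator $A$ commuting with all $\widetilde u_\nu^\pm(g)$, observes that $A$ then commutes with the generators $K_0,L_1,L_2$ on the Fourier basis $f_k=p_0^{-1/2-i\nu}e_k$, concludes from $[K_0,A]=0$ that $A f_k=\alpha_{\nu,k}f_k$, and then uses the ladder relations (the paper writes $L_\pm=L_1\pm L_2$, giving $L_i f_k=\sum_{k'}h_{k,k'}f_{k'}$ with $h_{k,k'}=0$ for $|k-k'|>1$ and $h_{k,k\pm1}\neq0$) to force $(\alpha_{\nu,k}-\alpha_{\nu,k'})h_{k,k'}=0$, hence $A=\alpha_{\nu,0}\mathbb{1}$. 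Your route instead shows directly that any closed invariant subspace is all of $\widetilde{\mathfrak h}_\nu^\pm$, which is logically equivalent but forces you to confront the passage from Lie-algebra irreducibility to group irreducibility; you handle this correctly via analytic vectors and Nelson's theorem, but the Schur formulation sidesteps this entirely, since a bounded $A$ in the commutant of the unitary group automatically commutes with the generators on smooth vectors. The key algebraic input---non-vanishing of the off-diagonal ladder coefficients---is identical in both arguments.

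For the equivalence part your plan is exactly what the paper does: the intertwiner $A_\nu$ (convolution by $\varrho_\nu$) is constructed in the subsequent Proposition~\ref{Prop:2.1.0}, where the intertwining relation is verified on the generating subgroups $R_0(\alpha)$ and $\Lambda_2(s)$ by the explicit change-of-variable computation you describe, and unitarity of $A_\nu$ for real $\nu$ is checked in Remark~\ref{A-unitary} via its Fourier coefficients.
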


\begin{proof} 
Let us consider $C^\infty_0$ functions on the forward light cone. It follows that the generators $L_2$, $L_1$ and $K_0$ take 
the form (see~\cite[\S 6a]{Ba})
\label{bargmangeneratorpage}
	\begin{align}
		i L_2  &= \cos \alpha  \frac{\partial}{ \partial \alpha}
			+ \sin \alpha \, p_0  \frac{\partial}{ \partial p_0} \; , \nonumber \\  
		i L_1  &= \sin \alpha  \frac{\partial}{ \partial \alpha} 
			- \cos \alpha \,    p_0  \frac{\partial}{ \partial p_0} \; , \nonumber \\ 
		i K_0 &= -  \frac{\partial}{ \partial \alpha} \; .
	\label{qqdq}
	\end{align}
Note that $K_0^2=-  \frac{\partial^2}{ \partial \alpha^2}$ is a positive operator. 
The eigenfunctions of $K_0^2$ on the light cone for the eigen\-value~$k^2$ are of the form $h (p_0) e_k$ with 
	\[
		e_k= \frac{{\rm e}^{ik\alpha}}{\sqrt{2\pi}} \; ,  \qquad k \in \mathbb{Z}  \; .
	\]
The generator of the horospheric translations is $i (L_2 - K_0)$ and 
the Casimir operator is
	\begin{equation}
		\label{casimir}
		C^2 = - K_0^2 + L_1^2 + L_2^2   \; . 
	\end{equation}
The latter equals \cite[Eq.~(6.5)]{Ba}
\label{lightconecoordinateKGpage}
	\begin{equation} 
	\label{casimir2}
	C^2 = - S (S+1) = - \partial_{p_0} p_0^2  \partial_{p_0}
        \; , \qquad \text{with} \quad S= p_0  \partial_{p_0} \; . 
	\end{equation} 
It is positive, since 
	\begin{align}
	 	\langle g, C^2 g \rangle 
		&=-\int_0^\infty {\rm d} p_0 \int_0^{2 \pi} \frac{{\rm d} \alpha}{2 \pi} \; \overline{g (p_0, \alpha)}\, 
	 	\partial_{p_0} p_0^2 \partial_{p_0} g (p_0, \alpha)   \\
	 	&= \int_0^\infty {\rm d} p_0 \int_0^{2 \pi} \frac{{\rm d} \alpha}{2 \pi} \; 
	 	p_0^2 \, | \partial_{p_0} g (p_0, \alpha)|^2 \ge 0 \;  .  \quad \nonumber
	\end{align}
The eigenvalue equation $\zeta^2= -s(s+1)$ has the solutions
	\begin{equation} 
		\label{dd1} 
			s^\pm= -\frac{1}{2}  \mp i \nu \; , \quad \text{with} \quad \nu =  
			\begin{cases}
				i \sqrt{\frac{1}{4} -\zeta^2} & \text{if $ 0< \zeta < 1 /2$}  \, ,\\
				 \sqrt{\zeta^2 - \frac{1}{4} } & \text{if $ \zeta \ge 1 /2$} \, .
			\end{cases} 
	\end{equation} 
Eq.~\eqref{casimir2} implies \cite[Eq.~(6.6b)]{Ba} that
the generalised eigenfunctions for the eigenvalue $\zeta^2$ of~$C^2$
are homogenous functions of the form 
	\[
		(\alpha, p_0) \mapsto {p_0}^{-\frac{1}{2} - i \nu} f ( \alpha, 1) \; , 
	\]
in agreement with \eqref{FG1}. 
Thus, in the representation $\widetilde u_{\nu}^{\pm}$ the Casimir operator is a 
multiple of the identity with eigenvalue $s^+ = - \tfrac{1}{2} - i \nu$. 

Now let $A$ be  a bounded linear operator $A$ on $\widetilde {\mathfrak h}_\nu^\pm$, which 
commutes with all $\widetilde u_{\nu}^{\pm} (g)$, $g \in SO_0(1,2)$. It follows \cite[p.~608]{Ba} that  
	\begin{align}
		K_0  A \,  f_k & = A K_0 \,  f_k  \; , \qquad f_k \doteq {p_0}^{-\frac{1}{2} - i \nu} e_k  \; , 
		\nonumber
		 \\
		L_i  A \, f_k & = A L_i  \, f_k  \; , 
		\qquad \; \; i = 1, 2 \; , \quad  k \in \mathbb{Z}  \; . \label{key-irred}
	\end{align}
The first equation implies that $A \, f_k  = \alpha_{\nu,k} \cdot  f_k   $ for some $\alpha_{\nu,k} \in \mathbb{C}$. 
To explore the content of the second and third equation in \eqref{key-irred}, we introduce
the ladder operators $L_\pm = L_1 \pm L_2$. They satisfy
	\begin{align}
		L_+ \, f_k & = c_{k+1} \sqrt{ \zeta^2+ k (k+1)} \, f_{k+1} \; , \nonumber \\
		L_- \, f_k & = c_{k}^{-1} \sqrt{ \zeta^2+ k (k-1)} \, f_{k-1} \; ,    
	\label{ladder-op}
	\end{align}
with $| c_k | = 1$ some constants of absolute value $1$. Since $L_1 = \tfrac{1}{2} (L_+ + L_-)$ 
and $L_2 = \tfrac{i}{2} (L_- - L_+)$, we obtain from \eqref{ladder-op} a set of equations, which may be written in the form 
\cite[Equ.~(5.34)]{Ba}
	\[
		L_i f_k = \sum_{k'} h_{k,k'} {f_{k'}} \; , 
	\]
where $h_{k,k'} = \overline{h_{k',k}} \, $, and where $h_{k,k'}= 0$ if $| k - k' | >1 $. 
We therefore obtain from the second and third equation in \eqref{key-irred}  equations of the form
	\[
		( \alpha_{\nu,k} - \alpha_{\nu,k'} ) h_{k, k'} = 0 \qquad \forall k, k'  \in \mathbb{Z} \; . 
	\]
A brief inspection shows 
that all $\alpha_{\nu,k}$ have to be equal to each other (for $\nu$ fixed), \emph{i.e.}, 
that $A = \alpha_{\nu, 0} \cdot \mathbb{1}$. 
\end{proof}

\subsection{Representations of $SO_0(1,2)$ on the forward light cone}
For $0< \zeta <1/2$, the eigenfunctions 
of the Casimir operator \eqref{casimir} are \emph{not} in $L^2 (\partial V^+, 
\frac{{\rm d} \alpha}{2 \pi} {\rm d} p_0 )$, 
as their decay in the variable $p_0$ is not fast enough to ensure the existence of the integral. 
Thus the unitary irreducible representations in the \emph{complementary series} (corresponding to 
$0< \zeta < 1/2$) do not appear, if one decomposes the reducible representation 
on $L^2(\partial V_+, \frac{{\rm d} \alpha}{2 \pi} {\rm d} p_0 )$ given by the pull-back:
 
\begin{theorem}[Spectral theorem] 
\label{spectheo}
As an operator on $L^2(\partial V_+, \frac{{\rm d} \alpha}{2 \pi} {\rm d} p_0 )$ with
domain~${\mathcal D}_\mathbb{R} (\partial V^+)$, the Casimir operator $C^2$ given in \eqref{casimir2} is essentially 
self-adjoint and positive. The positive square root of its self-adjoint extension, denoted by~$C$, has 
spectrum ${\rm Sp} (C) = [1/2,\infty)$. The corresponding spectral 
decomposition is
	\[
L^2 \bigl(\partial V_+, \tfrac{{\rm d} \alpha}{2 \pi} {\rm d} p_0 \bigr) 
=  \int_{\frac{1}{2}}^\infty {\rm d} \zeta \;  {\mathfrak H}_\zeta \; , \qquad 
         {\mathfrak H}_\zeta \cong L^2 \bigl(S^1, \tfrac{{\rm d} \alpha}{2 \pi} \bigr)\otimes
  \mathbb{C}^2 \; . 
	\]
\end{theorem}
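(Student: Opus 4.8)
\emph{Step 1: Reduce to a one-dimensional problem via the $K_0$-eigenspaces.} Since $C^2$ commutes with the rotations $R_0(\alpha)$ and hence with $K_0 = i\partial_\alpha$, I would first decompose $L^2(\partial V^+,\tfrac{{\rm d}\alpha}{2\pi}{\rm d}p_0)$ along the Fourier basis $e_k(\alpha)=\tfrac{1}{\sqrt{2\pi}}{\rm e}^{ik\alpha}$, $k\in\mathbb{Z}$. On each subspace $L^2(\mathbb{R}^+,{\rm d}p_0)\otimes\mathbb{C}e_k$ the Casimir operator $C^2$ acts, by \eqref{casimir2}, as the \emph{same} radial operator $-\partial_{p_0}p_0^2\partial_{p_0}$, independently of $k$. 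So essential self-adjointness, positivity, and the spectrum of $C^2$ all follow from the corresponding statements about the single operator $D := -\partial_{p_0}p_0^2\partial_{p_0}$ on $L^2(\mathbb{R}^+,{\rm d}p_0)$ with domain $C_0^\infty(\mathbb{R}^+)$ (or the restriction of ${\mathcal D}_\mathbb{R}(\partial V^+)$-functions to a ray).

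\emph{Step 2: Diagonalize $D$ by the Mellin transform.} The substitution $p_0={\rm e}^{-t}$, which already appears throughout the excerpt (e.g.\ $S=p_0\partial_{p_0}=-\partial_t$), turns $L^2(\mathbb{R}^+,{\rm d}p_0)$ into $L^2(\mathbb{R},{\rm e}^{-t}{\rm d}t)$ and conjugates $S=p_0\partial_{p_0}$ into $-\partial_t$; absorbing the weight by the unitary $u(t)\mapsto {\rm e}^{-t/2}u(t)$ sends everything to $L^2(\mathbb{R},{\rm d}t)$, where $S$ becomes $-\partial_t-\tfrac12$. Then the Casimir operator, which by \eqref{casimir2} equals $-S(S+1)$, goes over into the constant-coefficient operator $-(-\partial_t-\tfrac12)(-\partial_t+\tfrac12) = -\partial_t^2+\tfrac14$. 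A Fourier transform in $t$ (i.e.\ the Mellin transform in $p_0$, the transform Gelfand–Graev and the horospherical picture point to) diagonalizes this as multiplication by $\zeta^2$ with $\zeta^2=\nu^2+\tfrac14\in[\tfrac14,\infty)$, i.e.\ $\nu\in\mathbb{R}$. This simultaneously gives (a) essential self-adjointness on $C_0^\infty(\mathbb{R})$, hence on the original core after unwinding the unitaries, (b) positivity with $\mathrm{Sp}(C^2)=[\tfrac14,\infty)$, so $\mathrm{Sp}(C)=[\tfrac12,\infty)$, consistent with the positivity computation already displayed just before the theorem, and (c) the direct-integral decomposition $L^2(\mathbb{R}^+,{\rm d}p_0)\cong\int_{1/2}^\infty{\rm d}\zeta\,\mathbb{C}$ with spectral parameter $\zeta$ (the factor ${\rm d}\zeta$ versus ${\rm d}\nu$ is harmless since they differ by a locally bounded Jacobian, and one absorbs it into the normalization).

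\emph{Step 3: Reassemble the two factors.} Tensoring the radial decomposition with $\ell^2(\mathbb{Z})\cong L^2(S^1,\tfrac{{\rm d}\alpha}{2\pi})$ and noting that for each $\zeta\ge\tfrac12$ the generalized eigenspace is spanned by the homogeneous functions $(\alpha,p_0)\mapsto p_0^{-1/2-i\nu}f(\alpha)$ with $\nu=\pm\sqrt{\zeta^2-\tfrac14}$ — and that $\nu$ and $-\nu$ give the two distinct (but unitarily equivalent, by Theorem \ref{TH-irr}) choices — we obtain the fibre ${\mathfrak H}_\zeta\cong L^2(S^1,\tfrac{{\rm d}\alpha}{2\pi})\otimes\mathbb{C}^2$, the $\mathbb{C}^2$ accounting for the sign of $\nu$. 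This matches the $\widetilde u_\nu^\pm$ families of Proposition \ref{Prop:2.1}: only the principal series ($\zeta\ge\tfrac12$, $\nu$ real) occurs, while the complementary series ($0<\zeta<\tfrac12$) is absent precisely because $p_0^{-1/2-i\nu}$ with $\nu$ imaginary fails to be $L^2$ at one end of $\mathbb{R}^+$.

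\emph{Main obstacle.} The genuinely technical point is Step 2: establishing \emph{essential self-adjointness} of $D=-\partial_{p_0}p_0^2\partial_{p_0}$ on the stated core and identifying the correct domain after the chain of unitary transformations. The operator is of Bessel/Legendre type with a singular endpoint at $p_0=0$ (and at $p_0=\infty$), so one must check that it is in the limit-point case at both ends — equivalently, that neither formal solution $p_0^{-1/2\pm i\nu}$ of $(D-\zeta^2)u=0$ is square-integrable near $0$ or near $\infty$ when $\zeta^2<\tfrac14$ would force complex $\nu$ — using Weyl's limit-point/limit-circle criterion. Once the limit-point property at both ends is verified, essential self-adjointness on $C_0^\infty$ is automatic, and the Mellin/Fourier diagonalization delivers the rest; care is only needed to track that ${\mathcal D}_\mathbb{R}(\partial V^+)$ restricted to rays is indeed a core, which follows since it contains $C_0^\infty(\mathbb{R}^+)\otimes(\text{trig polynomials})$.
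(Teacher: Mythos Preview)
Your proposal is correct and takes essentially the same approach as the paper: both diagonalize the radial Casimir operator via the Mellin transform, obtaining $C^2\cong -\partial_t^2+\tfrac14$ with spectrum $[\tfrac14,\infty)$, and identify the $\mathbb{C}^2$ multiplicity as the $\pm\nu$ degeneracy. The paper's proof is terser---it simply invokes the Mellin inversion formula as a resolution of the identity and reads off the decomposition---whereas you unwind the same transform through an explicit chain of unitaries ($p_0={\rm e}^{-t}$, weight absorption, Fourier in $t$) and add the limit-point/limit-circle discussion for essential self-adjointness, which the paper leaves implicit in the unitarity of the Mellin transform.
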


\begin{proof} We exploit the properties of the Mellin transform \cite{Oberh2, Oberh}:  for $ g \in L^2(0,\infty)$ 
one finds
	\begin{equation}
		\label{mellin}
		g(p_0) = \frac{1}{2\pi} \int_\mathbb{R} {\rm d}\nu \;
		 p_0^{-\frac{1}{2}-i\nu} \int_0^\infty dp_0' \; {p_0'}^{-\frac{1}{2}+i\nu} g(p_0')  \; .		
	\end{equation}
The integral w.r.t.~${\rm d} \nu$ is over the whole real axis.   This implies:
\begin{itemize}
\item[$ i.)$] The identity operator on $L^2(\partial V_+, \frac{{\rm d} \alpha}{2 \pi} {\rm d} p_0 )$ is 
	\[
		\mathbb{1} =  \int_\mathbb{R} {\rm d} \nu  \biggl( \sum_j \;  | p_0^{- \frac{1}{2}-i\nu} h_j \rangle 
		\langle  {p_0'}^{-\frac{1}{2}-i\nu} h_j | 
		\biggr) \; , 
	\]
with $\{ h_j \in L^2(S^1, \frac{{\rm d} \alpha}{2 \pi})\mid  j \in \mathbb{N} \}$ an orthonormal 
basis in $L^2(S^1, \frac{{\rm d} \alpha}{2 \pi})$. Thus
$L^2(\partial V_+, \frac{{\rm d} \alpha}{2 \pi} {\rm d} p_0 )$ is the direct integral over $\nu \in\mathbb{R}$ of
the Hilbert spaces~$\widetilde {\mathfrak h}_\nu  $ consisting of functions
	\[
	  (p_0,\alpha)  \mapsto     p_0^{-\frac{1}{2} - i\nu}  h(\alpha) \; .
	 \]
The scalar product in $\widetilde {\mathfrak h}_{\nu}$ is just the scalar product in $L^2(S^1, \frac{{\rm d} \alpha}{2 \pi})$;
\item[$ ii.)$] the spectrum of $C$ in $L^2(\partial V_+ , \frac{{\rm d} \alpha}{2 \pi} {\rm d} p_0 )$ is $[\frac{1}{2},\infty)$; 
\item[$ iii.)$] for $\zeta^2= \frac{1}{4} + \nu^2$,
	\[
		{\mathfrak H}_\zeta = 
		\widetilde {\mathfrak h}_\nu   \oplus \widetilde {\mathfrak h}_{-\nu}   \; ; 
	\]
\emph{i.e.},  homogeneous functions of degree $s^+ $ and $s^- $ (see \eqref{dd1}) both  appear.
\end{itemize}
\end{proof}

\subsection{Intertwiners}

\begin{proposition}
\label{Prop:2.1.0}
Consider the representations described in  \eqref{FG1}.
It follows that the map 	
	\begin{equation}
	\label{intertwiner-1}
		(A_\nu h)(\alpha) \doteq \int_{\Gamma_0}   
		\frac{{\rm d} \alpha' }{2 \pi} \; \varrho_\nu (\alpha-\alpha') h(\alpha') \; , 
		\qquad \alpha \in [0, 2 \pi) \; , 
	\end{equation}
with\footnote{The Euler function\index{Euler function} $B(x,y)=\Gamma (x)\Gamma (y) / \Gamma (x+y)$ appears here.} 
	\begin{equation} 
		\label{dd2} 
		\varrho_\nu (\alpha) 
		\doteq \frac{\Gamma (\frac{1}{2}  - i \nu  )}{\Gamma ( \frac{1}{2})
		\Gamma ( - i \nu )} \;  \left(\sin^2 \tfrac{\alpha}{2} \right)^{-\frac{1}{2} - i \nu} 	\pi  \; , 
	\end{equation} 
defines an operator~ $A_\nu$, which {\em intertwines}\index{intertwiner} $\widetilde u_{-\nu}^{\pm}$ and $\widetilde u_{\nu}^{\pm}$, \emph{i.e.},  
	\begin{equation}
	\label{intertwiner}
 		A_\nu \widetilde u_{-\nu}^{\pm} (g)  = \widetilde u_{\nu}^{\pm} (g)  A_\nu   \qquad \forall g \in G \; . 
	\end{equation}
\end{proposition}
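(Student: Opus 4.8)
The plan is to recognise $A_\nu$ as the restriction to the circle $\Gamma_0$ of the standard intertwining operator between the two realisations of the principal/complementary series on homogeneous functions on the forward light cone, and then to read off \eqref{intertwiner} from the Lorentz invariance of the ambient metric. Write $\eta_0\doteq(1,0,-1)\in\partial V^{+}$. By the identification $G/MAN\cong SO(2)\cong\Gamma_0$ and the derivation of \eqref{FG1} from \eqref{eqRepGH}, a vector $h\in\widetilde{\mathfrak h}^{\pm}_{-\nu}$ is the restriction to $\Gamma_0$ of a function $\widehat h$ on $\partial V^{+}\setminus\{0\}$ that is homogeneous of degree $s^{-}=-\tfrac12+i\nu$ (see \eqref{dd1}) and $\sigma_{\pm}$-covariant under $M$, while $\widetilde{\mathfrak h}^{\pm}_{\nu}$ corresponds to homogeneity degree $s^{+}=-\tfrac12-i\nu$; under this correspondence the group acts on $\widehat h$ by the pull-back $\widehat h\mapsto\widehat h\circ g^{-1}$, the Jacobian ${\rm e}^{(-\frac12\mp i\nu)t}$ of \eqref{FG1} being exactly the factor produced when $\widehat h\circ g^{-1}$ is re-expressed through its values on the section $\Gamma_0$. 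One then introduces, for $p\in\partial V^{+}\setminus\{0\}$,
	\[
		\bigl(\widehat A_\nu\widehat h\bigr)(p)\doteq c_\nu\int_{\Gamma_0}{\rm d}\mu_{\Gamma_0}(q)\;(p\cdot q)^{-\frac12-i\nu}\,\widehat h(q)\;,
	\]
extended to all of $\partial V^{+}\setminus\{0\}$ by homogeneity of degree $s^{+}$ in $p$, with $p\cdot q$ the ambient Minkowski product and $c_\nu$ a constant. The proposition then follows from: (a)~$\widehat A_\nu$ is well defined and maps the carrier of $\widetilde u^{\pm}_{-\nu}$ into that of $\widetilde u^{\pm}_{\nu}$; (b)~its restriction to $\Gamma_0$ is the operator \eqref{intertwiner-1}--\eqref{dd2}; (c)~it commutes with the pull-back action of $G$.

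\textbf{The one genuine computation (b).} Parametrising $p=R_0(\alpha)\eta_0$ and $q=R_0(\alpha')\eta_0$ and using that $R_0(\beta)$ is a Lorentz transformation while $\eta_0\cdot\eta_0=0$, one gets
	\[
		p\cdot q=\eta_0\cdot R_0(\alpha'-\alpha)\eta_0=1-\cos(\alpha-\alpha')=2\sin^{2}\tfrac{\alpha-\alpha'}{2}\;,
	\]
so that $(p\cdot q)^{-\frac12-i\nu}$ coincides with $\bigl(\sin^{2}\tfrac{\alpha-\alpha'}{2}\bigr)^{-\frac12-i\nu}$ up to the constant $2^{-\frac12-i\nu}$, which is absorbed into $c_\nu$; for the intertwining property the precise value of $c_\nu$ is immaterial (it is pinned down instead by the unitarity statement of Proposition~\ref{Prop:2.1} and by $A_\nu A_{-\nu}\propto\mathbb{1}$). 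Hence $\widehat A_\nu\!\upharpoonright\!\Gamma_0$ has kernel $\varrho_\nu$ as in \eqref{dd2}. For (a): homogeneity of degree $s^{+}$ in $p$ is built in, and since $s^{+}+s^{-}=-1$ the integrand $q\mapsto(p\cdot q)^{-\frac12-i\nu}\widehat h(q)$ is homogeneous of degree $-1$ in $q$, so by the contour-independence of $\int_{\Gamma}{\rm d}\mu_{\Gamma}$ on degree-$(-1)$ functions recorded earlier (following \cite[Ch.~9.1.9]{Vil}) the integral does not depend on the chosen contour; this makes $\widehat A_\nu\widehat h$ well defined, and the $\sigma_{\pm}$-covariance of $\widehat h$ passes to $\widehat A_\nu\widehat h$ because $M\subset O(1,2)$ preserves both $p\cdot q$ and ${\rm d}\mu_{\Gamma_0}$.

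\textbf{Intertwining and the main obstacle (c).} For $g\in G$ and $\widehat h$ homogeneous of degree $s^{-}$, one replaces the contour $\Gamma_0$ by $g\,\Gamma_0$ in the integral defining $\bigl(\widehat A_\nu(\widehat h\circ g^{-1})\bigr)(p)$ --- legitimate by~(a) --- and substitutes $q=g q'$; since $g$ preserves the Minkowski product, carries light-cone contours to light-cone contours and transports the invariant measure accordingly, this yields $\widehat A_\nu(\widehat h\circ g^{-1})=(\widehat A_\nu\widehat h)\circ g^{-1}$, which restricted to $\Gamma_0$ is precisely \eqref{intertwiner}. The delicate step --- the one I expect to be the main obstacle to write out honestly --- is the book-keeping in this substitution: one must check that restricting the abstract operator $\widehat A_\nu$ on the cone to the section $\Gamma_0$ reproduces the concrete formula \eqref{FG1} with the $\sqrt{\lambda_g}$-factors matching on \emph{both} sides simultaneously. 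If one prefers to bypass the light cone, the same conclusion is reached by verifying \eqref{intertwiner} directly on a generating set, say $\{R_0(\beta),\Lambda_1(t)\}$, using \eqref{udrei23}: for $R_0(\beta)$ it is immediate because $\varrho_\nu(\alpha-\alpha')$ is a convolution kernel, and for $\Lambda_1(t)$ it reduces to the change of variables on $\Gamma_0$ induced by the boost, whose Jacobian is read off from \eqref{udrei}--\eqref{udrei23}, together with the kernel identity that is nothing but the restriction to $\Gamma_0$ of $(\Lambda_1(t)p)\cdot(\Lambda_1(t)q)=p\cdot q$. The complementary-series case ($\nu$ purely imaginary) is handled by analytic continuation in $\nu$, or simply by noting that all the manipulations above are valid on $C_0(\Gamma_0)$ irrespective of the sign of $\tfrac14-\zeta^{2}$.
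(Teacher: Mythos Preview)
Your proposal is correct but takes a genuinely different route from the paper. The paper verifies \eqref{intertwiner} by direct computation on a generating set: first it observes that $\widetilde u_{-\nu}^{\pm}(R_0(\alpha)) = \widetilde u_{\nu}^{\pm}(R_0(\alpha))$ and that convolution by $\varrho_\nu$ commutes with translations in the angular variable, so the rotation case is immediate; then it computes both $A_\nu\,\widetilde u_{-\nu}^{\pm}(\Lambda_2(s))h$ and $\widetilde u_{\nu}^{\pm}(\Lambda_2(s))\,A_\nu h$ explicitly via the change of variables in \eqref{udrei23}, expanding the kernel $\varrho_\nu\bigl(\beta-\beta'(s,\alpha)\bigr)$ through the addition formula for cosine and the explicit expressions for $\sin\beta'$, $\cos\beta'$, and checks by hand that the two resulting integrals agree.

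Your approach instead lifts everything to the forward light cone, expressing $A_\nu$ through the manifestly Lorentz-invariant kernel $(p\cdot q)^{s^+}$ and invoking the contour-independence of integrals of degree-$(-1)$ homogeneous functions (recorded in the paper after \eqref{hmeasure} and in \cite[Ch.~9.1.9]{Vil}). This is more conceptual and explains \emph{why} the intertwiner exists rather than merely verifying it; it is essentially the viewpoint of Faraut (Proposition~\ref{propo2.5}) and of Bros--Moschella, which the paper itself adopts later in the Fourier--Helgason context. The paper's brute-force calculation, by contrast, has the virtue of being entirely elementary and self-contained on the circle, with no need to discuss homogeneous extensions or to reconcile the $\sqrt{\lambda_g}$-factors with the lift --- precisely the book-keeping you correctly flag as the delicate step in your argument. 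Your alternative suggestion to verify on $\{R_0(\beta),\Lambda_1(t)\}$ is exactly what the paper does, only with $\Lambda_2$ in place of $\Lambda_1$; either boost together with the rotations generates $SO_0(1,2)$, so this is immaterial.
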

\color{black}

\goodbreak
\begin{remarks}\quad
\begin{itemize}
\item [$i.)$] 
The integral kernels appearing in \eqref{intertwiner-1} were first derived by Barg\-mann~\cite{Ba}. In the literature they are frequently 
written in the following alternative form:
	\begin{equation} 
	\label{Barg-factor}
		\varrho_\nu (\alpha) 
		=	\frac{\Gamma (\frac{1}{2}  - i \nu  )}{\Gamma ( \frac{1}{2})
		\Gamma (  - i \nu  )} \;  \left(\tfrac{1- \cos \alpha}{2} \right)^{-\frac{1}{2} - i \nu} \pi \; . 
	\end{equation} 
\item[$ii.)$] In case $\nu =  \pm i \sqrt{\frac{1}{4} - \zeta^2}$ with $0 < \zeta < \frac{1}{2}$, 
the sesquilinear form  
	\[
		h, h' \mapsto \int_{\Gamma_0} {\rm d} \alpha \; \overline{h(\alpha)} (A_\nu h')(\alpha)
	\] 
is positively definite~\cite{Ner}. 
This implies 
	\[
		\int_{\Gamma_0} {\rm d} \alpha \; \overline{ h( \alpha ) }  (A_\nu h')(\alpha)   
		 = \int_{\Gamma_0} {\rm d} \alpha \;  \overline{ A_\nu h (  \alpha ) }  h' ( \alpha  ) \; ,  
	\]
and, consequently, \eqref{intertwiner}
defines a positive operator\footnote{The operator $A_\nu$  {\em intertwines} the pullback action of 
$SO_0(1,2)$ on homogeneous functions of degree 
$-\frac{1}{2} + \sqrt{\frac{1}{4}-\mu^2}$ and $-\frac{1}{2} - \sqrt{\frac{1}{4}-\mu^2}$, respectively.}.
\item [$iii.)$] In case $\nu $ is real, 
we have $A_\nu^*  = A_{-\nu}$ \cite[Lemma 2.1]{Sally}. In fact, $A_\nu$ is unitary as  $A_\nu^* A_\nu = \mathbb{1}$; 
see Remark \ref{A-unitary} below. 
\item [$iv.)$] 
The bilinear form-valued function $\nu \to ( \, . \,  , A_\nu \, . \,  )_{L^2({\tt S}^1, {\rm d} \alpha)}$ is 
meromorphic in $\mathbb{C}$. The poles of this function are the points 
$i \nu = 0, \frac{1}{2}, 1, \frac{3}{2}, \ldots$. 
\item [$v.)$] 
The integral \eqref{intertwiner-1} is convergent if $i \nu < 0$ \cite[p.~605]{Ner}. In this case
	\begin{align*}		
		\frac{  \Gamma (\frac{1}{2})
		\Gamma (  - i \nu  )}{ \Gamma (\frac{1}{2}   - i \nu )} & 
		( {\rm e}^{i n \psi}  , A_\nu {\rm e}^{i m \psi} )_{L^2({\tt S}^1, {\rm d} \psi)} = 
		\nonumber \\
		& = 
		\frac{2^{ i \nu} }{ 4 \, B( 1/2 - i \nu - n , 1/2 - i \nu + n ) } \; \delta_{n,m} \;  . 
	\end{align*}
The beta function is 
	\[
		\qquad B( 1/2 - i \nu - n , 1/2 - i \nu + n ) 
		= \frac{\Gamma(1/2 - i \nu - n)\Gamma(1/2 - i \nu + n)}{\Gamma(1 - 2i \nu)} \; . 
	\]
Using $\Gamma(z+1) = z \Gamma (z)$ and
the Stirling formula for the $\Gamma$-functions implies that the pre-factor  on the r.h.s.~diverges as  
	\[
		|n|^{2 i \nu} \left( 1 + O \left( \frac{1}{n}\right)\right) \; , \qquad n \to \infty \; . 
	\]
Hence the form $( \, . \,  , A_\nu \, . \,  )_{L^2({\tt S}^1, {\rm d} \psi)}$ is well-defined on the 
Sobolev space  $\mathbb{H}^{i \nu} (S^1)$; see Definition \ref{sob-circ}.
\end{itemize}
\end{remarks}

\begin{proof}
\color{black}
Inspecting \eqref{udrei23} we find that 
	\begin{equation} 
		\widetilde u_{-\nu}^{\pm} (R_{0}(\alpha)) = \widetilde u_{\nu}^{\pm} (R_{0}(\alpha)) \; , \qquad \alpha \in [0, 2 \pi) \; . 
	\label{remark-identity}
	\end{equation}
Moreover, 
	\begin{align*}
	& \int_{\Gamma_0} \frac{ {\rm d} \beta' }{2 \pi} \; \varrho_\nu (\beta-\beta') \bigl( \widetilde u_{-\nu}^{\pm} (R_{0}(\alpha)) h\bigr)(\beta') 
	\nonumber \\
	& \qquad = \int_{\Gamma_0} \frac{ {\rm d} \beta' }{2 \pi} \;  \varrho_\nu (\beta-\beta') h (\beta' + \alpha) \nonumber \\
	& \qquad = \widetilde u_{\nu}^{\pm} (R_{0}(\alpha))  
	\Bigl( \int_{\Gamma_0} \frac{ {\rm d} \beta' }{2 \pi} \;  \varrho_\nu (\beta - \beta') h (\beta') \Bigr) \; . 
	\end{align*}
It remains to be shown that
	\[
 		A_\nu \widetilde u_{-\nu}^{\pm} (\Lambda_2 (s))  = \widetilde u_{\nu}^{\pm} (\Lambda_2 (s))  A_\nu   \qquad \forall s \in \mathbb{R} \; . 
	\]
Compute
	\begin{align}
	\label{I-1}
	& \int_{\Gamma_0} \frac{ {\rm d} \beta' }{2 \pi} \;  \varrho_\nu (\beta-\beta') 
	\bigl( \widetilde u_{-\nu}^{\pm} (\Lambda_{2}(s)) h\bigr)(\beta') \nonumber	\\
	& \quad = \int_{\Gamma_0} \frac{ {\rm d} \beta' }{2 \pi} \;  \varrho_\nu (\beta-\beta') \, 
	{\rm e}^{s^- t } h \left(\alpha \right)  \; ,  
	\end{align}
with $\alpha= \alpha(s, \beta')$ given by 
	\[
	R_0 (\alpha) P^k \Lambda_1(t) D(q) \doteq \Lambda_{2}(s)^{-1} R_0(\beta') \; . 
	\]
Note that $1+s^- = -s^+$. Thus
	$\frac{ {\rm d}  \alpha(s, \beta') }{ {\rm d} \beta'} = (\cosh s - \sinh s \sin \beta')^{- 1 } $, 
	\[
	\sin \beta'  = \tfrac{\sinh s + \cosh s \sin \alpha}{\cosh s + \sinh s \sin \alpha} \; , 
	\qquad 
	\cos \beta'  = \tfrac{\cos \alpha}{\cosh s + \sinh s \sin \alpha} \; , 
	\]
and
	$
	\cosh s - \sinh s \sin \beta' = \bigl(\cosh s  + \sinh s \sin \alpha \bigr)^{-1} $.
This allows us to rewrite \eqref{I-1} as
	\begin{align}
	\label{I-1-1}
	& \int_{\Gamma_0} \frac{ {\rm d} \beta' }{2 \pi} \;  \varrho_\nu (\beta-\beta') \bigl( \widetilde u_{-\nu}^{\pm} (\Lambda_{2}(s)) h\bigr)(\beta')  \\
	& \qquad = \int_{\Gamma_0} \frac{ {\rm d} \alpha }{2 \pi} \;  \varrho_\nu \bigl(\beta-\beta' (s, \alpha) \bigr) 
	\bigl(\cosh s  + \sinh s \sin \alpha \bigr)^{s^+}  h ( \alpha ) \; . \nonumber
	\end{align}
The kernel $\varrho_\nu  \bigl(\beta-\beta' (s, \alpha) \bigr) $ can be rewritten using the formula
	\begin{align}
	\label{I-7}
	& \bigl( 1 - \cos \bigl( \beta - \beta' (s, \alpha) \bigr)\bigr)^{s^+}
	 \\
	& \qquad \qquad = \Bigl( 1 - \cos  \beta  \cos \bigl(\beta' (s, \alpha)\bigr) 
	- \sin  \beta  \sin \bigl(\beta' (s, \alpha)\bigr) \Bigr)^{s^+} \; . 	
	 \nonumber \\
	& \qquad \qquad = \Bigl(
	\tfrac{ \cosh s  + \sinh s \sin \alpha - \cos \beta \cos \alpha - \sin \beta \sinh s - \sin \beta \cosh s \sin \alpha}
	{\cosh s + \sinh s \sin \alpha}  \Bigr)^{s^+}. 	
	\nonumber
	\end{align}
For \eqref{I-1} this yields
	\begin{align}
	\label{I-9}
	& \int_{\Gamma_0} \frac{ {\rm d} \beta' }{2 \pi} \;   \varrho_\nu (\beta-\beta') \bigl( \widetilde u_{-\nu}^{\pm} (\Lambda_{2}(s)) h\bigr)(\beta')   \\
	& \qquad = \frac{\Gamma (\frac{1}{2}+ \nu) \; \pi }{\Gamma ( \frac{1}{2})
		\Gamma (\nu)}   \int_{\Gamma_0} \frac{ {\rm d} \alpha }{2 \pi} \;  
	\Bigl( \cosh s  + \sinh s \sin \alpha - \cos \beta \cos \alpha 
	\nonumber \\
	& \qquad \qquad \qquad \qquad  \qquad \qquad 
	- \sin \beta \sinh s 
	- \sin \beta \cosh s \sin \alpha \Bigr)^{s^+}  h^-  \bigl( \alpha \bigr) \; . \nonumber
	\end{align}
On the other hand, using \eqref{udrei}
	\begin{align}
	\label{I-9a}
	& \widetilde u_{\nu}^{\pm} (\Lambda_{2}(s)) \Bigl( \int_{\Gamma_0} \frac{ {\rm d} \beta' }{2 \pi} \; 
	 \varrho_\nu (\beta -\beta')  h (\beta') \Bigr) \nonumber \\
	& \qquad = \bigl(\cosh s - \sinh s \sin \beta \bigr)^{s^+}  
		\nonumber \\
		& \qquad \qquad \qquad \qquad  \times	 \int_{\Gamma_0} \frac{ {\rm d} \alpha }{2 \pi} \;  \varrho_\nu 
	\bigl( \arccos \bigl( \tfrac{\cos \beta}{\cosh s -  \sinh s \sin \beta} \bigr) -\alpha \bigr)  
		h  \bigl( \alpha \bigr)
 		\nonumber \\
	& \qquad = \bigl(\cosh s - \sinh s \sin \beta \bigr)^{s^+} 
	 \int_{\Gamma_0} \frac{ {\rm d} \alpha }{2 \pi} \;  \varrho_\nu 
	\bigl( \beta' (s, \alpha) -\alpha \bigr)  
		h  \bigl( \alpha \bigr) \; . 
	\end{align}
Next we compute $\varrho_\nu \bigl(\beta' (s, \alpha)-\alpha) \bigr)$:  
	\begin{align*}
	\bigl( 1 - \cos \bigl( \beta' (s, \alpha)  - \alpha \bigr) \bigr)^{s^+}
	& = \Bigl( 1 - \cos  \bigl(\beta' (s, \alpha)\bigr)   \cos \alpha 
					- \sin  \bigl(\beta' (s, \alpha)\bigr)  \sin \alpha \Bigr)^{s^+}   	
	 \nonumber \\
	& = \Bigl(
	\tfrac{ \cosh s  - \sinh s \sin \beta - \cos \beta \cos \alpha + \sin \alpha \sinh s - \sin \beta \cosh s \sin \alpha}
	{\cosh s - \sinh s \sin \beta}  \Bigr)^{s^+}. 	
	\nonumber
	\end{align*}
Inserting this result into \eqref{I-9a} shows that 
\begin{align}
	& \int_{\Gamma_0} \frac{ {\rm d} \beta' }{2 \pi} \;  \varrho_\nu (\beta-\beta') \bigl( \widetilde u_{-\nu}^{\pm} (\Lambda_{2}(s)) h\bigr)(\beta') 
	\nonumber \\
	& \qquad = \widetilde u_{\nu}^{\pm} (\Lambda_{2}(s)) 
	\Bigl( \int_{\Gamma_0} \frac{ {\rm d} \beta' }{2 \pi} \;  \varrho_\nu (\beta - \beta') h (\beta') \Bigr)  \; . 
\end{align} 
Since $ R_0(\alpha)$ and $\Lambda_2(s)$ generate $SO_0(1,2)$, this verifies \eqref{intertwiner}. 
\end{proof}

\begin{remark}
\label{A-unitary}
Clearly  \eqref{remark-identity} implies
	\[ 
		[ A_\nu ,  \widetilde u_{\nu}^{\pm} (R_{0}(\alpha))] = 0 \; , \qquad \alpha \in [0, 2 \pi) \; .  
	\]
Therefore $A_\nu$ has diagonal form in the spectral representation of the generator 
of the rotations $\alpha \mapsto R_{0}(\alpha)$. In fact \cite[p.~619]{Ba},
	\[ 
	 	\frac{\Gamma (\frac{1}{2} - i \nu)} { \Gamma(\frac{1}{2}) \Gamma(- i \nu)}
		\left( \sin^2 \frac{ \alpha }{2} \right)^{- \frac{1}{2} - i \nu} \pi  = 1+ 
		\sum_{k\in \mathbb{Z} \setminus \{0\} }  \prod_{j=1}^{| k |}
		\frac{  ( j  + \frac{1}{2} + i \nu)}{  (  j + \frac{1}{2} - i \nu)} \; {\rm e}^{i k \alpha  } \; , 
	\]
and, consequently, the Fourier coefficients\footnote{These coefficients refer to the 
eigenfunctions $ \frac{{\rm e}^{i k \alpha  }}{\sqrt{2 \pi} } $.}
of $\varrho_\nu$ are 
	\[ 
		\widetilde \varrho_\nu (k) 
		= \sqrt{2 \pi} \; \frac{\Gamma ( | k | + \frac{1}{2} + i \nu )}{\Gamma ( | k | + \frac{1}{2} - i \nu)} 
		\frac{\Gamma (\frac{1}{2} - i \nu)} { \Gamma(\frac{1}{2} + i \nu)} \; .
	\]
Hence, for $\nu \in \mathbb{R}$, 
	\begin{align*}
		\int_{\Gamma_0} \frac{ {\rm d} \alpha' }{2 \pi} \, \overline{ \rho_\nu(\alpha-\alpha') } \, \rho_\nu(\alpha'-\alpha'') 
		&= \int_{\Gamma_0} \frac{ {\rm d} \alpha' }{2 \pi} \, \overline{
		\sum_k \widetilde \varrho_\nu (k)  \frac{ {\rm e}^{- i k (\alpha -\alpha')}}{\sqrt{2 \pi}}
		} \, \sum_j \widetilde \varrho_\nu (j)  \frac{ {\rm e}^{- i j (\alpha' -\alpha'')}}{\sqrt{2 \pi}} 
		\\
		& = \sum_k \overline{
		\frac{ \widetilde \varrho_\nu (k)  }{\sqrt{2 \pi} }}
		\sum_j \frac{ \widetilde \varrho_\nu (j)  }{\sqrt{2 \pi}}
		\\
		& 
		\qquad \qquad \times 
		\int_{\Gamma_0} \frac{ {\rm d} \alpha' }{2 \pi} {\rm e}^{- i k (\alpha -\alpha')} {\rm e}^{ i j (\alpha' -\alpha'')} 
		\\
		& =  	\sum_k {\rm e}^{i k (\alpha - \alpha'')} 
		\\
		& =  	2\pi \;  \delta(\alpha-\alpha'')
		\; .   
	\end{align*}
We have used that $ |\widetilde \varrho_\nu (k)|^2 = 2\pi $ for all $k \in \mathbb{Z}$. 
Note that $2\pi \;  \delta(\alpha-\alpha'')$ is the kernel of the identity operator with respect to the 
measure $\tfrac{{\rm d} \alpha''}{2\pi}$. 
\end{remark}

\subsection{The time reflection\index{time reflection}}
\label{sec:time reflection} 

Our next aim is to extend the unitary irreducible representations of
the two fold covering group of $SO_0(1,2)$ discussed so 
far to (anti-)unitary representations of $O(1,2)$. 

We start with the induced representation $ \Pi_\nu^{\pm}$
\eqref{eqIndRepSO12}, and consider first the case  $\nu\in
\mathbb{R}$, \emph{i.e.}, $\zeta\geq 1/2$. Let $\Pi_{\nu,0}^\pm(T)$ be the anti-linear map from
$\mathfrak{h}_{-\nu,0}^\pm$ to $\mathfrak{h}_{\nu,0}^\pm$ 
defined by 
	\begin{equation} 
		\label{eqUT0}
		(\Pi_{\nu,0}^\pm(T) f)(g) \doteq \overline{f(Pg)} \; ,  \quad
 		 f\in  \mathfrak{h}_{\nu,0}^\pm   \; ,  
	\end{equation}
where $P$ is the space-reflection, $P=R_0(\pi)\in SO_0(1,2)$. 
Since $\lambda_P(g MAN)=1$, this is an isometric
map. Now pick am intertwiner 
	\[
		A_\nu \colon \mathfrak{h}_{\nu,0}^\pm  \to \mathfrak{h}_{-\nu,0}^\pm  
	\]
between the representations $\Pi_\nu^\pm$ and $\Pi_{-\nu}^\pm$ and
define $\Pi_{\nu}^{\pm}(T) \doteq \Pi_{\nu,0}^{\pm}(T)\circ A_\nu$: 
	\begin{equation} 
		\label{eqInRepT}
		\big(\Pi_{\nu}^\pm(T) f\big)(g) \doteq \overline{(A_\nu f)(Pg)} \; . 
	\end{equation}
Then one has 
	\[
		\Pi_{\nu}^\pm(T)\, \Pi_{\nu}^\pm(g)\, \Pi_{\nu}^\pm(T)^{-1}=
		\Pi_{\nu,0}^{\pm}(T) \, \Pi_{-\nu}^\pm(g)\, \Pi_{\nu,0}^{\pm}(T)^{-1}  
	\]
and 
	\[
		\big(\Pi_{\nu}^\pm(T)\, \Pi_{\nu}^\pm(g)\, \Pi_{\nu}^\pm(T)^{-1} \; f \big)(g')
		= \lambda_g(Pg' \,H)^{\frac{1}{2}}\, f(Pg^{-1}P\,g') \; ,
	\]
while  on the other hand
	\[
		\big(\Pi_{\nu}^\pm(TgT^{-1}) \; f \big)(g')= \lambda_{PgP}(g' \,H)^{\frac{1}{2}}\, f((PgP)^{-1}\,g') \; , 
	\]
where it has been used that the adjoint action of $T$ on $SO(_0(1,2)$ 
coincides with that of the space-reflection $P$, $TgT=PgP$.  Since
$\lambda_{PgP}(g' \,H)= \lambda_g(Pg' \,H)$, this proves that 
	\[
		\Pi_{\nu}^\pm(T)\, \Pi_{\nu}^\pm(g)\, \Pi_{\nu}^\pm(T)^{-1}=\Pi_{\nu}^\pm(TgT^{-1}) \;  .
	\]
Thus, $\Pi_{\nu}^\pm(T)$ is a representer of $T$ which, in addition, 
is easily seen to be anti-unitary. 
 
We wish to find the equivalent representer in the representation space 
$C_0(G/H)$. The intertwiner $A_\nu$ corresponds uniquely to an
operator $\widetilde A_\nu$ in $C_0(G/H)$ by the equivalence~\eqref{eqFeqFGH}, 
	\[
		\widetilde{A}_\nu \tilde{f} \doteq \widetilde{A_\nu f} \; , 
	\]
which intertwines the representations $\widetilde u_\nu^\pm$ and  $\widetilde u_{-\nu}^\pm$. 
Now this equivalence translates $\Pi_{\nu}^\pm(T)$ into the
anti-unitary operator $\widetilde u_{\nu}^\pm(T)$ in $C_0(SO(2))$ given by  
	\begin{align*} 
\big(\widetilde u_{\nu}^\pm(T)\,\tilde f\big)(R_0(\alpha)) 
& \doteq   \big(\widetilde{\Pi_{\nu}^\pm(T)\,f}\big)(R_0(\alpha))
=  \big({\Pi_{\nu}^\pm(T)\,f}\big)(R_0(\alpha)) = 
\overline{\big(A_\nu\,f\big)(P R_0(\alpha))}\\
&=\overline{\big(\widetilde{A_\nu\,f}\big)(P R_0(\alpha))}
\; = \; \overline{\big(\widetilde A_\nu\,\tilde f\big)(P R_0(\alpha))}. 
\end{align*}
In the second and fourth equation we have used the fact that
$\Xi(R)=R\in G$ for any rotation $R\in SO(2)\cong G/MAN$ and that
$PR_0(\alpha)\equiv R_0(\alpha+\pi)$ is a rotation.  In short, $\widetilde u_{\nu}^\pm(T)$ acts on $C_0(SO(2))$ as 
	\begin{equation} 
		\big(\widetilde u_{\nu}^\pm(T)\, h \big)(R_0(\alpha)) = 
			\overline{\big(\widetilde A_\nu\,h \big)(P R_0(\alpha))} \; .
	\end{equation}
Note that $\widetilde u_{\nu}^\pm(T)^2=A_\nu^* A_\nu = \mathbb{1}$.

In the case  $\nu\in i\mathbb{R}$,  \emph{i.e.}, $0<\zeta< 1/2$, the
anti-linear map $\Pi_{\nu,0}^\pm(T)$ defined above leaves
$\mathfrak{h}_{\nu,0}^\pm$ invariant, and we take this operator to be
the representer of $T$ in $\mathfrak{h}_{\nu,0}^\pm$. The proof of the 
representation property goes as above.  
We then define $\widetilde u_{\nu}^\pm(T)$ as the equivalent representer
in the  representation space $C_0( \Gamma_0 )$, namely, 
	\[
		\big(\widetilde u_{\nu}^\pm(T)\,  h  \big)(R_0(\alpha)) \doteq 
		 \overline{h(P R_0(\alpha))} \; . 
	\]
Anti-unitarity can be seen as follows: 
	\begin{align*}
		\|\widetilde u_{\nu}^\pm(T)h\|_\nu &= \big\langle \widetilde u_{\nu}^\pm(T)h \, , \, A_\nu\,
		\widetilde u_{\nu}^\pm(T)h \big\rangle_{ L^2(\Gamma_0, {\rm d} \mu_{\Gamma_0}) } \\
		&=  \int_{SO(2)} \frac{d\alpha}{2\pi} \; h(PR_0(\alpha)) \, 
			\int_{\Gamma_0} \frac{d\alpha'}{2\pi} \; 
			\rho_\nu(\alpha-\alpha')\overline{h(PR_0(\alpha'))} \\
		&=  \int_{ \Gamma_0 } \frac{d\alpha'}{2\pi} \; \overline{h(R_0(\alpha'))}) \, 
			\int_{\Gamma_0} \frac{d\alpha}{2\pi} \; 
			\rho_\nu(\alpha-\alpha') h(R_0(\alpha)) \\
		&=  \big\langle h \, , \, A_\nu\, h \big\rangle_{ L^2(\Gamma_0, {\rm d} \mu_{\Gamma_0}) } = \|h\|_\nu. 
\end{align*}
In the fourth equation we have used the symmetry
$\rho_\nu(\alpha)=\rho_\nu(-\alpha)$. 

Note that the preceding discussion also shows that in both cases,
$\zeta<1/2$ and $\zeta \geq1/2$, the representer of the
space-reflection $P$ is given by
	\[ 
		\big(\widetilde u_{\nu}^{\pm}(P)h\big)(R_0(\alpha)) =
		h(PR_0(\alpha))\equiv h(R_0(\alpha-\pi)).
	\] 
Summarising, our discussion leads to the following definition.

\begin{definition} For $h(\alpha) \in \widetilde {\mathfrak h}_\nu$
define an antilinear operator by setting 
	\begin{equation}
		\label{tilde-time-reflection}
			\big(\widetilde u_{\nu}^{\pm} (T)h\big)(\alpha) \doteq
				\begin{cases}  
                       			\overline{(A_\nu h) (\alpha-\pi) }
					& \text{if \ $1/2 \le \zeta $\, , }\\
					\overline{h (\alpha-\pi) }
					& \text{if \ $0<  \zeta < 1/2$\, . }
				\end{cases}
	\end{equation}
\end{definition} 

We have shown: 

\begin{proposition}
The anti-unitary operator $\widetilde  u_{\nu}^{\pm} (T)$ is an anti-unitary representer of the time-reflection $T$
on $\widetilde{\mathfrak h}_\nu^{\pm}$. Together with
  $\widetilde u_{\nu}^{\pm} (P_2)$ it extends the representation $\widetilde u_{\nu}^{\pm}$ from $ SO_0(1,2)$ to $O(1,2)$. 
\end{proposition}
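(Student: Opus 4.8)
The plan is to verify the three defining properties of a $\ast$-representation extended by $\widetilde u_\nu^\pm(T)$: anti-unitarity, the correct conjugation action on $SO_0(1,2)$, and the $O(1,2)$-group relations tying $T$ to $P_2$ (equivalently $P_1$ and $T$). Most of this work has in fact already been done in the discussion preceding the proposition; what remains is to assemble it cleanly for both ranges of $\zeta$ and to check the few relations not yet displayed.

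First I would record anti-unitarity. For $\zeta\ge 1/2$ this is Remark \ref{A-unitary}: since $A_\nu^\ast A_\nu=\mathbb{1}$ and complex conjugation together with the measure-preserving rotation $\alpha\mapsto\alpha-\pi$ are isometric on $L^2(\Gamma_0,{\rm d}\mu_{\Gamma_0})$, the composite $\widetilde u_\nu^\pm(T)$ is anti-unitary, with $\widetilde u_\nu^\pm(T)^2=A_\nu^\ast A_\nu=\mathbb{1}$. For $0<\zeta<1/2$ the computation is the displayed four-line chain just above the definition: one uses the symmetry $\varrho_\nu(\alpha)=\varrho_\nu(-\alpha)$ and the substitution $\alpha\mapsto\alpha+\pi$ (which leaves ${\rm d}\mu_{\Gamma_0}$ invariant and sends $PR_0(\alpha)$ to $R_0(\alpha)$) to move the conjugation past the kernel $\varrho_\nu$, obtaining $\|\widetilde u_\nu^\pm(T)h\|_\nu=\|h\|_\nu$. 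Anti-linearity is immediate from the presence of the complex conjugate in \eqref{tilde-time-reflection}.

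Next I would establish the conjugation identity $\widetilde u_\nu^\pm(T)\,\widetilde u_\nu^\pm(g)\,\widetilde u_\nu^\pm(T)^{-1}=\widetilde u_\nu^\pm(TgT^{-1})$ for $g\in SO_0(1,2)$. This is exactly what the paragraph around \eqref{eqUT0}--\eqref{eqInRepT} proves at the level of the induced representation $\Pi_\nu^\pm$ on $\mathfrak h_{\nu,0}^\pm$: using $TgT=PgP$, the equality $\lambda_{PgP}(g'H)=\lambda_g(Pg'H)$, and the fact that $A_\nu$ intertwines $\Pi_\nu^\pm$ with $\Pi_{-\nu}^\pm$ (Proposition \ref{Prop:2.1.0}), one checks that both sides act as $f\mapsto\big(g'\mapsto\lambda_g(Pg'H)^{1/2}f(PgP)^{-1}g')\big)$. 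Transporting this along the unitary equivalence \eqref{eqFeqFGH}, which carries $A_\nu$ to $\widetilde A_\nu$ and $\Pi_\nu^\pm(T)$ to $\widetilde u_\nu^\pm(T)$, yields the claim on $\widetilde{\mathfrak h}_\nu^\pm$. The $\zeta<1/2$ case is verbatim the same with $A_\nu$ replaced by the identity, since there $\Pi_{\nu,0}^\pm(T)$ already maps $\mathfrak h_{\nu,0}^\pm$ to itself; the paper notes the proof ``goes as above''. Combined with the already-known covariance of $\widetilde u_\nu^\pm$ under $P=R_0(\pi)$ and with $\widetilde u_\nu^\pm(P_2)$ (the representer of parity, available since $\widetilde u_\nu^\pm$ is defined on the two-fold cover of $SO_0(1,2)$ and $P_2=P_1P$ with $P_1T$ handled by the same conjugation formula), the subgroup generated by $SO_0(1,2)$, $T$ and $P_2$ realizes all of $O(1,2)$, because $O(1,2)/SO_0(1,2)$ is the Klein four-group generated by the images of $T$ and $P_1$.

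The main obstacle, and the only place where genuine checking (rather than quoting) is needed, is the consistency of the two pieces: one must confirm that $\widetilde u_\nu^\pm(T)$ and $\widetilde u_\nu^\pm(P_2)$ satisfy the \emph{group} relations of the coset representatives, not merely the conjugation relations — in particular that $\widetilde u_\nu^\pm(T)^2=\mathbb{1}$ (done above), that $\widetilde u_\nu^\pm(P_2)^2=\mathbb{1}$, and that $\widetilde u_\nu^\pm(T)$ and $\widetilde u_\nu^\pm(P_2)$ commute up to the expected phase, so that together with $\widetilde u_\nu^\pm\!\restriction_{SO_0(1,2)}$ they close into a bona fide (anti-)unitary representation of $O(1,2)$ rather than merely a projective one. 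For the two-fold cover this is where one uses that $M\cong\mathbb{Z}_2$ sits inside $K$ and that $P_1T$ is the edge-reflection $\Theta_{W_1}$ of Section \ref{sec:time reflection}; the relevant relation $(P_1T)^2=\mathbb{1}$ in $O(1,2)$ lifts, and one verifies on the explicit formulas \eqref{tilde-time-reflection} and the action \eqref{FG1} of $R_0(\pi)$ that no obstructing cocycle appears. I expect this cocycle bookkeeping — rather than any hard analysis — to be the delicate step, and it is dispatched by direct computation with the kernel $\varrho_\nu$ using its Fourier coefficients $\widetilde\varrho_\nu(k)$ from Remark \ref{A-unitary}.
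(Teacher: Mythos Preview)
Your proposal is correct and follows essentially the same approach as the paper: the proposition is stated immediately after ``We have shown:'', so the paper's proof \emph{is} the discussion in Section~\ref{sec:time reflection} that you cite, covering anti-unitarity (via $A_\nu^\ast A_\nu=\mathbb{1}$ for $\zeta\ge 1/2$ and the four-line chain using $\varrho_\nu(\alpha)=\varrho_\nu(-\alpha)$ for $\zeta<1/2$) and the conjugation relation $\widetilde u_\nu^\pm(T)\widetilde u_\nu^\pm(g)\widetilde u_\nu^\pm(T)^{-1}=\widetilde u_\nu^\pm(TgT^{-1})$ via $TgT=PgP$ and the intertwining property of $A_\nu$.

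Your step~(3), verifying that the group relations close without a projective obstruction, is more careful than the paper, which simply asserts the extension to $O(1,2)$ and does not check the cocycle explicitly; similarly the paper never writes down $\widetilde u_\nu^\pm(P_2)$ in the circle picture (it only appears later, in the covariant realization, as the pull-back $[f]\mapsto[(P_2)_\ast f]$). Your parenthetical justification for where $\widetilde u_\nu^\pm(P_2)$ comes from is slightly garbled --- $P_2$ is not in the two-fold cover of $SO_0(1,2)$, since $\det P_2=-1$ --- but your decomposition $P_2=P_1P$ with $P=R_0(\pi)\in SO_0(1,2)$ is the right way to build it once a representer of $P_1$ (or equivalently $P_1T$) is in hand.
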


\section{Unitary representations on the mass shell}
\label{UIRm}

The Hannabus decomposition implies that a function  $f \in {\mathfrak h}_0^{\pm}$ is determined by 
$f (k' MAN) $ with $k' \in K'= \left\{ \Lambda_2 (s) ,  \Lambda_2 (s)P \mid s \in \mathbb{R} \right\}$. 
We have seen in Subsection \ref{massshell-sub} that 
we can identify the cosets\footnote{The cosets $ k' MN$, $ k' \in K' $, can be identified with 
points in $\partial V^+$.}
of the form
	\[
		k' MAN\; , \qquad k' \in K' \; , 
	\]
with points in the two hyperbolas
	\begin{align} 
		\label{eqGamma1}
		\Gamma_1 & \doteq  \Gamma_+ \; \dot \cup \; \Gamma_-  \nonumber \\
		& = \{ p_+ (s)  \in \partial V^+ \mid s \in \mathbb{R} \} \; \dot \cup \;  \{ p_- (s) \in \partial V^+ \mid s \in \mathbb{R} \} \; , 
	\end{align} 
where 
	\begin{equation}
		\label{mass-hyperboloids}
		p_\pm(s) \doteq \Lambda_{2} (s)  p_\pm(0)  
		= 
		\begin{pmatrix} m  \cosh s \\
		\pm m  \sinh s \\
		\pm m
		\end{pmatrix}   \; , \qquad s \in \mathbb{R} \; .  
	\end{equation}
Note that $p_\pm(0) = \left( \begin{smallmatrix} m \\ 0 \\ \pm m \end{smallmatrix}\right)$.
By construction, the boosts $s \mapsto \Lambda_{2}(s)$, $s \in \mathbb{R}$, leave the curves  $\Gamma_\pm$ invariant.
The reflections $P_1, P_2 $ and $P$ act on $\Gamma_1$:
	\[
	P_1 \Lambda_{2} (s)  p_\pm(0) 
	= \begin{pmatrix} m \cosh s \\
		m \sinh s\\
		\mp m
		\end{pmatrix} \; , \qquad 
	P_2 \Lambda_{2} (s)  p_\pm(0) 
		= \begin{pmatrix} m \cosh (-s) \\
		m \sinh (-s)\\
		\pm m
		\end{pmatrix} \; , 
	\]
and, consequently, 
	$
	P \Lambda_{2} (s)  p_\pm(0) 
		= \left( \begin{smallmatrix} m \cosh (-s) \\
		\pm m \sinh (-s)\\
		\mp m
		\end{smallmatrix} \right)$. 
		
\begin{remark} Given a homogeneous function on the forward light cone, one can define 
a function on $\Gamma_1$ by restriction. On the contrary (see \cite[Equ.~4.44]{BM}), given a pair of functions 
$p_1 \mapsto (h_+ (p_1), h_-(p_1))$ on $\Gamma_+ \; \dot \cup \; \Gamma_-$, 
	\[
	 	f(p)= \begin{cases}
				\left(\tfrac{p_2}{m}\right)^s h_+ \left(\tfrac{m p_1}{p_2}\right)  & \text{if $p_2 >0$, }\\ 
				\left(\tfrac{p_2}{m}\right)^s h_- \left(\tfrac{m p_1}{p_2}\right)   & \text{if $p_2 <0$, }
			\end{cases}
	\]
defines a homogeneous function of degree $s$ on the light cone $\partial V^+$.
\end{remark}

The restriction of the quasi-invariant 
measure  $\tfrac{{\rm d} \alpha}{2 \pi} {\rm d} p_0$ to
$\Gamma_\pm$ is~$\tfrac{{\rm d} s}{2}$. It follows that $L^2(\Gamma_1,{\rm d} \mu_{\Gamma_1})$ consists of 
two copies of $L^2(\mathbb{R},\tfrac{{\rm d} s}{2})$:
	\begin{equation} 
		L^2(\Gamma_1,{\rm d} \mu_{\Gamma_1})
		\cong L^2(\mathbb{R},\tfrac{{\rm d} s}{2})\oplus L^2(\mathbb{R},  - \tfrac{{\rm d} s}{2} ) \; . \label{eqUBM}
	\end{equation} 
Note that the two disjoint parts of $\Gamma_1$ form a closed curve enclosing the origin; thus the minus sign in the second component.

\begin{lemma}
The generator of the boost $s \mapsto \widetilde u_{\nu}^{\pm} ( \Lambda_2(s)) $ on $L^2(\Gamma_1,{\rm d} \mu_{\Gamma_1})$ is 
\begin{equation}
\label{boostgenerator}
 - i 
\bigl( \tfrac{\partial}{\partial s} \oplus \tfrac{\partial}{\partial s} \bigr) \; .
\end{equation}
Its spectrum is absolutely continuous on the whole real line. 
\end{lemma}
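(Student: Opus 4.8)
The plan is to use that the boosts $\Lambda_2(s)$ leave each of the two hyperbolas $\Gamma_\pm$ invariant and act on them by a pure translation of the parameter $s$, so that on $L^2(\Gamma_1,{\rm d}\mu_{\Gamma_1})$ the one-parameter group $s'\mapsto\widetilde u_\nu^\pm(\Lambda_2(s'))$ is unitarily the (two-fold) regular representation of $\mathbb{R}$ by translations, whose generator and spectral type are standard.

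Concretely: from \eqref{mass-hyperboloids} one has $\Lambda_2(s')\,p_\pm(s)=\Lambda_2(s'+s)\,p_\pm(0)=p_\pm(s+s')$, so $\Lambda_2(s')$ maps $\Gamma_\pm$ bijectively onto itself, translating the coordinate $s$. Under the identification (Hannabuss decomposition, cf.\ \eqref{eqGamma1}) of $\widetilde{\mathfrak h}_\nu^\pm$ with functions on $\Gamma_1$ --- equivalently, with the restrictions to $\Gamma_1$ of the homogeneous functions of degree $s^+=-\tfrac12-i\nu$ on $\partial V^+$ --- the representation \eqref{eqIndRepSO12} carries \emph{no} multiplier along $\Gamma_\pm$: the element $\Lambda_2(s')^{-1}\Lambda_2(s)=\Lambda_2(s-s')$ already lies in $K'$ with trivial $\Lambda_1$- and $D$-factors, so the factor ${\rm e}^{(-\frac12-i\nu)t}$ occurring in the analogue of \eqref{FG1} equals $1$; equivalently, two distinct points of $\Gamma_1$ are never proportional, so restricting a homogeneous function to $\Gamma_1$ and applying $\Lambda_2(s')^{-1}$ produces no homogeneity factor. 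Since, as recorded before \eqref{eqUBM}, ${\rm d}\mu_{\Gamma_1}$ restricts to the translation-invariant measure $\pm\tfrac{{\rm d}s}{2}$ on $\Gamma_\pm$, the operator $\widetilde u_\nu^\pm(\Lambda_2(s'))$ is unitary and, on each summand of \eqref{eqUBM}, acts by $\bigl(\widetilde u_\nu^\pm(\Lambda_2(s'))h\bigr)(s)=h(s-s')$.

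It remains to identify the generator of the translation group $h(\cdot)\mapsto h(\cdot-s')$ on $L^2(\mathbb{R})$ and its spectrum. Differentiating at $s'=0$ gives $\tfrac{{\rm d}}{{\rm d}s'}\widetilde u_\nu^\pm(\Lambda_2(s'))\big|_{s'=0}=-(\partial_s\oplus\partial_s)$, a skew-adjoint operator, so the self-adjoint generator is $-i(\partial_s\oplus\partial_s)$, defined on the natural Sobolev domain; the positive constants $\tfrac12$ and the orientation sign in the two measures do not affect the operator. Conjugating $-i\partial_s$ by the Fourier transform on $L^2(\mathbb{R})$ turns it into multiplication by the real variable, which has purely absolutely continuous spectrum equal to $\mathbb{R}$; the orthogonal sum of two such copies therefore has purely absolutely continuous spectrum on the whole real line. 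The only step that requires genuine care is the verification that the Radon--Nikodym/multiplier factor of the induced representation is trivial for the elements $\Lambda_2(s')$ and that the formal ``$-{\rm d}s/2$'' on the $\Gamma_-$ component is merely a bookkeeping device for the decomposition $\int_{\Gamma_1}=\int_{\Gamma_+}+\int_{\Gamma_-}$ and does not disturb the Hilbert-space structure; everything else is routine.
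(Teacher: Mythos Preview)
Your argument is correct, and the paper states this lemma without proof, so there is no paper proof to compare against. Your approach---showing that $\Lambda_2(s_0)$ acts on each branch $\Gamma_\pm$ as the pure translation $s\mapsto s-s_0$ (because $\Lambda_2(s_0)^{-1}\Lambda_2(s')=\Lambda_2(s'-s_0)$ has trivial $\Lambda_1$- and $D$-factors in the Hannabuss decomposition, so the factor ${\rm e}^{(-\frac12-i\nu)t}$ collapses to $1$), and then reading off the generator and its absolutely continuous spectrum via the Fourier transform on $L^2(\mathbb{R})$---is exactly the natural route and is fully justified by the surrounding material (your multiplier computation is precisely the specialization of \eqref{GHK}/\eqref{sktq} to $g=\Lambda_2(s_0)$, giving $j=t=q=0$). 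Your remarks about the translation-invariance of ${\rm d}\mu_{\Gamma_1}\!\upharpoonright_{\Gamma_\pm}=\pm\tfrac{{\rm d}s}{2}$ and the harmlessness of the orientation sign are also to the point.
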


\begin{theorem} The induced representation \eqref{indrep} is given by
	\begin{align}
		\left( \widetilde u_{\nu}^{\pm} (g) h_{+} \right) (s')  
		& = 
		\chi_{\frac{i}{2}-\nu} \bigl( \Lambda_1(t)\bigr) h_{(-)^j} (s) \nonumber \\ 
		& = {\rm e}^{(-\frac{1}{2} - i \nu) t} h_{(-)^j} (s) \; ,  
	\label{GHK}
	\end{align}
where\footnote{In particular,  $\Lambda_2(s)^{-1}  \Lambda_2(s') = \Lambda_2(s'-s)$.}
	\begin{equation}
	\label{sktq}
		\Lambda_2(s)P^j \Lambda_1(t) D(q) \doteq g^{-1} \Lambda_2(s')\; , 
	\end{equation}
with $s, j, t$ and $q \in \mathbb{R}$. 
\end{theorem}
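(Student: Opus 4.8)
The statement is the mass-shell counterpart of Proposition~\ref{Prop:2.1} and of formula~\eqref{FG1}, and the plan is to prove it in exactly the same way, only replacing the Iwasawa cross-section by the Hannabuss one. In the derivation of~\eqref{FG1} the abstract induced representation~\eqref{indrep} was transported, via the general equivalence~\eqref{eqRepGH}, to an explicit operator on $C_0(\Gamma_0)$ using the global section $\Xi\colon G/MAN\to G$, $R_0(\alpha)MAN\mapsto R_0(\alpha)$. Here I would instead use the section whose image is $K'=\{\Lambda_2(s),\,\Lambda_2(s)P\mid s\in\mathbb R\}$, sending the coset of the point $p_\pm(s)=\Lambda_2(s)p_\pm(0)\in\Gamma_1$ (see~\eqref{eqGamma1}, \eqref{mass-hyperboloids}) to $\Lambda_2(s)$ respectively $\Lambda_2(s)P$. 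By the Hannabuss (Takahashi) decomposition this section is defined on almost every coset, which is enough to pin down the representation on $L^2(\Gamma_1,{\rm d}\mu_{\Gamma_1})$; the remaining null set of exceptional cosets is dealt with at the end by continuity.

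The computational core is then to substitute the decomposition~\eqref{sktq}, namely $g^{-1}\Lambda_2(s')=\Lambda_2(s)P^{j}\Lambda_1(t)D(q)$, into the general formula~\eqref{eqRepGH}, $(\widetilde\Pi_\nu^{\pm}(g)h)(p)=\lambda_g(\Xi(p)MAN)^{1/2}\,\pi_\nu^\pm(\Omega(g,p))\,h(g^{-1}\!\cdot p)$. Three factors have to be read off. First, the image coset $g^{-1}\!\cdot(\text{coset of }p_+(s'))$ has $\Xi$-value $\Lambda_2(s)P^{j}$, i.e.\ it is the point $p_{(-)^j}(s)$, so the pull-back term is $h_{(-)^j}(s)$; the parity of $j$ is precisely what records whether $g^{-1}$ moves the given point of $\Gamma_+$ back into $\Gamma_+$ or across into $\Gamma_-$. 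Second, the Wigner cocycle~\eqref{eqWignerRot} comes out, after cancelling $P^{2j}=\mathbb{1}$, as $\Omega(g,p)^{-1}=\Lambda_1(t)D(q)$; rewriting $D(-q)\Lambda_1(-t)=\Lambda_1(-t)D(-{\rm e}^{t}q)$ with the help of~\eqref{scalingH} gives $\pi_\nu^\pm(\Omega(g,p))={\rm e}^{-i\nu t}$, so that the $\sigma_\pm$-character drops out. Third, by~\eqref{lambda-function-2} together with the explicit $\rho$-function of Remark~\ref{rhocomputed} (normalised to $1$ on $K'$) one finds $\lambda_g(\Xi(p)MAN)={\rm e}^{-t}$, hence the Radon--Nikodym factor is ${\rm e}^{-t/2}$; compare Remark~\ref{rhot}~$ii.)$. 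Multiplying, ${\rm e}^{-t/2}\cdot{\rm e}^{-i\nu t}\cdot h_{(-)^j}(s)={\rm e}^{(-\frac12-i\nu)t}h_{(-)^j}(s)=\chi_{\frac i2-\nu}(\Lambda_1(t))\,h_{(-)^j}(s)$, which is~\eqref{GHK}. The action on the $h_-$-component is obtained identically, or by conjugating with the reflection $P_1$, which interchanges $\Gamma_+$ and $\Gamma_-$.

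It remains to check that this realises~\eqref{indrep} \emph{unitarily} on $L^2(\Gamma_1,{\rm d}\mu_{\Gamma_1})$: this follows from the remark after~\eqref{eqRepGH} once one observes that ${\rm d}\mu_{\Gamma_1}={\rm d}s/2\oplus(-{\rm d}s/2)$ is exactly the strongly quasi-invariant measure on $G/MAN$ restricted to the contour $\Gamma_1$ traversed as a single closed loop around the origin --- this is the reason for the relative minus sign in~\eqref{eqUBM} --- while the group law is guaranteed by the cocycle relation of Theorem~\ref{cocycle-relation}. Irreducibility and the extension to (anti-)unitary representations of $O(1,2)$ are then inherited from Theorem~\ref{TH-irr} and from Section~\ref{sec:time reflection} through the unitary equivalence with the circle picture. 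I expect the only genuinely delicate point to be the reflection bookkeeping: keeping consistent track of the parity of $j$ (hence of which branch of $\Gamma_1$ the point lands on), of the fact that the section $\Xi$ is globally defined only off a null set, and of the orientation built into ${\rm d}\mu_{\Gamma_1}$. Everything else is the routine substitution of~\eqref{sktq} into~\eqref{eqRepGH}.
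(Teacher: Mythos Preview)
Your approach is correct and follows essentially the same route as the paper. The paper's own proof is extremely terse---two sentences---and simply evaluates the abstract induced representation~\eqref{eqIndRepSO12} on $\mathcal{F}_0$ at the point $\Lambda_2(s')$, then reads off the result from the covariance property $f(g'h)=\pi_\nu^\pm(h^{-1})f(g')$ together with the Hannabuss decomposition~\eqref{sktq}. You do the equivalent computation via~\eqref{eqRepGH} on $C_0(G/H)$, working out the Wigner rotation and the Radon--Nikodym factor explicitly; your observation that the $P^{2j}$ cancels inside $\Omega(g,p)$ (so that $\sigma_\pm$ drops out and the branch label is carried entirely by the section value) is exactly what the paper's shortcut encodes implicitly through the identification of $h_{(-)^j}(s)$ with $f(\Lambda_2(s)P^j)$.
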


\begin{remark}
Recalling \eqref{distance-to-horosphere}, we find
	\begin{equation} 
	\label{h-limit}
		t = d ( x,  P_{  t } ) \;  \qquad \forall \, x = 
		\left( \begin{smallmatrix}
		\tfrac{q^2}{2r}  \\
		q \\ 
		r  - \tfrac{q^2}{2r} 
		\end{smallmatrix}
		\right) \in P_{\tau=0} \; . 
	\end{equation}
Thus \eqref{GHK}  yields 
	\begin{align}
		\left( u_{\nu}^{\pm} (g) h_+  \right) (s')  
		& = \bigl| x \cdot p_\pm  (t) \bigr|^{- \frac{1}{2} -i \nu} 
		h_{(-)^j} (s)  \; , 
	\label{H}
	\end{align}
with $s$, $j$ and $t$ defined by \eqref{sktq} and $x \in P_{\tau=0}$. 
\end{remark}

\begin{proof} 
If  $p \in \Gamma_1$ and $p_\pm = \Lambda_2 (s)  p_\pm (0)$, then the cosets $g MAN$ can be identified with $\Gamma_1$, 
and we may thus consider 
	\[
	\bigl(  \Pi_{\nu}^{\pm}  (g) f \bigr)  (\Lambda_2 (s'))
	= \sqrt{\lambda_g \bigl( \Lambda_2 (s') MN \bigr)} \; f \bigl(  g^{-1}  \Lambda_2 (s') \bigr) 
	\]
Thus the induced representation takes the form \eqref{GHK}. 
\end{proof}

\begin{remark} In particular \cite[Equ.~(4.45--4.47)]{BM}, recalling \eqref{R0-s},
we find
	\begin{align*}
		\left( \widetilde u_{\nu} (R_0(\alpha)) f \right) (p)  
		& =  
		\begin{cases} 
			\left(\tfrac{  | p_2 |}{m} \right)^{-\frac{1}{2} - i \nu} 
			\begin{pmatrix} 
					h_+  ( (p_1 \cos \alpha - m \sin \alpha) / p_2) \\
					h_- (  (p_1 \cos \alpha - m \sin \alpha) / p_2)
			\end{pmatrix}  & \text{if $p_2>0$ \; ,}
		\\
		\\
			\left(\tfrac{  | p_2 |}{m} \right)^{-\frac{1}{2} - i \nu}
			\begin{pmatrix} 
		 		h_- (  (p_1 \cos \alpha - m \sin \alpha) / p_2) \\
				h_+ ( (p_1 \cos \alpha - m \sin \alpha) / p_2) 
			\end{pmatrix}  & \text{if $p_2<0$\; ,}
		\end{cases}
	\end{align*}
with 
	\[	p_2 = p_1 \sin \alpha + m \cos \alpha \; . \]
Recalling \eqref{lambda1-s}, we find
	\begin{align*}
		\left( \widetilde u_{\nu} (\Lambda_1(t)) f \right) (p)  
		& =  
				\begin{cases} 
			| p'_2 |^{-\frac{1}{2} - i \nu} 
			\begin{pmatrix} 
					h_+  (  p_1 / p'_2 ) \\
					h_-  (  p_1 / p'_2 )
			\end{pmatrix}  & \text{if $p'_2 > 0$ \; ,}
		\\
		\\
			| p'_2 |^{-\frac{1}{2} - i \nu} 
			\begin{pmatrix}
		 		h_-  (  p_1 / p'_2 )  \\
				h_+  (  p_1 / p'_2 ) 
			\end{pmatrix}  & \text{if $p'_2 < 0$\; ,}
		\end{cases}
	\end{align*}
with 
	\[
		p'_2 = \tfrac{m \cosh t - \sqrt{p_1^2+ m^2} \sinh t }{m} \;. 
	\]
Finally, 
	\begin{align*}
		\left( \widetilde u_{\nu} (\Lambda_2(s)) f \right) (p)  
		& =  \begin{cases}
		h_+ ( p_1 \cosh s - \sqrt{p_1^2+ m^2} \sinh t) \\
		h_- ( p_1 \cosh s - \sqrt{p_1^2+ m^2} \sinh t)		
		\end{cases} \; . 
	\end{align*}
\end{remark}

\subsection{The group contraction $SO_0(1,2)$ to $E_0(1,1)$}

On the two-dimensional Minkowski space, the plane waves 
	\[
		(t, q) \mapsto {\rm e}^{i (t, q) \cdot (\sqrt{p_1^2+ m^2}, p_1) } 
	\] 
can be interpreted as 
improper common eigenvectors of  the space-time translations $T(t', q')$,  $(t', q') \in \mathbb{R}^{1+1}$.
They form an improper basis in the eigenspace of the Casimir operator 
	\[
		{\it M}^2 =\sqrt{ {\it P}_0^2- {\it P}_1^2}  
	\] 
for the eigenvalue $m^2>0$.

The energy operator ${\it P}_0$ and the momentum operator~${\it P}_1$ act like multiplication operators in Fourier space:
	\begin{align*}
		{\it P}_0 \; {\rm e}^{i \,   (t, q) \cdot (\sqrt{p_1^2+ m^2}, p_1) } & = \sqrt{p_1^2+ m^2} \;  {\rm e}^{i \,  (t, q) \cdot (\sqrt{p_1^2+ m^2}, p_1)} \; , \\
		\qquad {\it P}_1 \;  {\rm e}^{i \,   (t, q) \cdot (\sqrt{p_1^2+ m^2}, p_1) } & = p_1 \; {\rm e}^{i \,  (t, q) \cdot (\sqrt{p_1^2+ m^2}, p_1)} \; . 
	\end{align*}
We note that the inner product 
$	 (t, q) \cdot (\sqrt{p_1^2+ m^2}, p_1) $ 
equals $m$ times the Euclidean distance of the point $(t, q)$ 
from the line passing through the origin whose normal vector is $(\sqrt{p_1^2+ m^2}, - p_1) $.

Now let us compare this with the situation on de Sitter space. Consider a point $x \in  \Gamma^+ (W_1) \subset dS$, 
parametrized\footnote{Note that in \cite[p.~358]{BM} the order of the group elements in the parametrisation is reversed.} 
 by $t$ and $q$, \emph{i.e.}, 
	\begin{align*}
		x (t, q) & = \Lambda_1\left( \tfrac{t}{r} \right) D \left( \tfrac{q}{r} \right) 
		\left( \begin{smallmatrix} 0 \\ 0 \\ r \end{smallmatrix} \right) \\
		& = 	\begin{pmatrix}
										 \cosh \frac{t}{r} &  0 &\sinh \frac{t}{r} \\
										0 &  1&0  \\
										\sinh \frac{t}{r} & 0 & \cosh \frac{t}{r}   
 									\end{pmatrix} \begin{pmatrix} \frac{q^2}{2r} \\ q \\ r- \frac{q^2}{2r} \end{pmatrix} 
				 = 	\begin{pmatrix}
										 \frac{q^2}{2r} {\rm e}^{- \frac{t}{r}}  +  r \sinh \frac{t}{r} \\
										q  \\
										\frac{q^2}{2r} {\rm e}^{- \frac{t}{r}} +   r \cosh \frac{t}{r}   
 									\end{pmatrix} 
							\; .  
	\end{align*}
We have $\lim_{r \to \infty} \left[ x  (t, q) - \left( \begin{smallmatrix}
									  0  \\
									  0 \\
									  r
									\end{smallmatrix} \right) \right] = \left( \begin{smallmatrix}
									   t  \\
									  q \\
									   0
									\end{smallmatrix} \right) $. 
In particular, for $r$ very large the $x_2$-component of $x  (\tau, \xi)$ approaches $r$.

\begin{lemma} [Bros \& Moschella \cite{BM}] As $r \to \infty$,
the generalised plane wave\footnote{Note that  in a neighbourhood of the origin 
the curves in $dS$, for which $x \cdot p $ is constant, become straight lines in $dS$
perpendicular to $p$ as the radius $r \to \infty$ in \eqref{eqdSMin}.} (see Section \ref{planwaves})
	\begin{equation}
	\label{pw-23}
		\mathbb{R}^{1+1} \ni (t, q)
		\mapsto  \left(  \tfrac{x (t, q)}{r} \cdot  \tfrac{p}{ m }
					\right)^{-\frac{1}{2} \mp i m r } 
	\end{equation}
approaches---see \eqref{distance-to-horosphere} and \eqref{H}---the plane wave 
	\begin{equation}
	\label{mpw}
		(t, q)  \mapsto {\rm e}^{\pm i t \sqrt{p_1^2 + m^2}  \mp i q p_1}  \; ,
	\end{equation}
of a Minkowski space particle with mass $m$. 
\end{lemma}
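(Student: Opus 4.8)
The plan is to prove this by elementary asymptotic analysis of the quantity inside the power in \eqref{pw-23}, using only the explicit parametrisation of $x(t,q)$ computed just above the lemma and the identification of the exponent as (essentially) a horospheric distance via \eqref{distance-to-horosphere} and \eqref{h-limit}. First I would fix $p = p_\pm(s) = \left( \begin{smallmatrix} m\cosh s \\ \pm m \sinh s \\ \pm m \end{smallmatrix} \right) \in \Gamma_\pm$ on the mass hyperbola and compute the Minkowski inner product $x(t,q)\cdot p$ explicitly. Using
	\[
		x(t,q) = \begin{pmatrix}
			\tfrac{q^2}{2r}{\rm e}^{-t/r} + r\sinh\tfrac{t}{r} \\
			q \\
			\tfrac{q^2}{2r}{\rm e}^{-t/r} + r\cosh\tfrac{t}{r}
		\end{pmatrix}
	\]
and $x\cdot p = x_0 p_0 - x_1 p_1 - x_2 p_2$, one gets a closed expression; the key point is that $-x(t,q)\cdot p/(rm)$ equals $1 + \tfrac{1}{r}(\text{linear in }t,q) + O(1/r^2)$ as $r\to\infty$, with the linear term being exactly $\mp(t\sqrt{p_1^2+m^2}\cdot\tfrac{1}{m}\cdots)$ — more precisely one identifies the coefficient using $p_0 = m\cosh s = \sqrt{p_1^2+m^2}$ and $p_1 = \pm m\sinh s$, so that $\sqrt{p_1^2+m^2} = p_0$ and the linear term reads $\tfrac{1}{r}\bigl(\mp t \sqrt{p_1^2+m^2}\,/\,? \pm q p_1 /\,?\bigr)$ up to the overall normalisation by $m$.

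The second step is the limit itself. Once one has $\tfrac{x(t,q)}{r}\cdot\tfrac{p}{m} = 1 + \tfrac{c(t,q)}{r} + O(1/r^2)$ with $c(t,q) = \mp\bigl(t\sqrt{p_1^2+m^2} - q p_1\bigr)/m \cdot(\text{sign})$, raising to the power $-\tfrac{1}{2}\mp imr$ and using the standard fact that $\bigl(1 + \tfrac{c}{r} + O(1/r^2)\bigr)^{-\frac12 \mp imr} = \exp\bigl((-\tfrac12\mp imr)\log(1+\tfrac{c}{r}+O(1/r^2))\bigr) \to \exp(\mp i m c(t,q))$, since the $-\tfrac12$ factor contributes $(1+c/r)^{-1/2}\to 1$ and $\mp imr\cdot(\tfrac{c}{r}+O(1/r^2)) \to \mp imc$. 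Plugging in $c(t,q)$ and simplifying the factor of $m$ yields exactly ${\rm e}^{\pm i t\sqrt{p_1^2+m^2}\mp iqp_1}$, which is \eqref{mpw}. The convergence is pointwise in $(t,q)$, which is all that is claimed; one can upgrade to uniform convergence on compact subsets of $\mathbb{R}^{1+1}$ trivially since the error terms are uniformly $O(1/r)$ there.

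For bookkeeping I would also invoke \eqref{h-limit}, which states $t = d(x,P_t)$ for $x\in P_{\tau=0}$, together with \eqref{distance-to-horosphere}, $d(x,P_{\tau_1}) = r\ln|\tfrac{x}{r}\cdot p(\tfrac{\tau_1}{r})|$; this gives a conceptual reading of the limit: the exponent $-(\tfrac12\pm imr)\log(\tfrac{x}{r}\cdot\tfrac{p}{m})$ is, up to the $-\tfrac12$, $\mp im$ times a horospheric/geodesic distance, and in the flat limit this distance becomes the Euclidean distance of $(t,q)$ from the line through the origin with normal $\bigl(\sqrt{p_1^2+m^2},-p_1\bigr)$ — which, as noted in the text just before the lemma, is precisely $\tfrac{1}{m}(t,q)\cdot(\sqrt{p_1^2+m^2},p_1)$, the phase of the Minkowski plane wave \eqref{mpw}. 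This matches the remark that, near the origin, the level curves of $x\cdot p$ become the straight lines perpendicular to $p$.

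The main obstacle — really the only place requiring care — is getting the signs and the factor of $m$ right in the linear-order term $c(t,q)$, since the parametrisation mixes $\Lambda_1(t/r)$ and $D(q/r)$ in a specific order (the footnote flags that Bros–Moschella use the opposite order) and the two sheets $\Gamma_\pm$ carry opposite signs of $p_1$ and $p_2$. I would handle this by doing the $\Gamma_+$ case in full and then obtaining $\Gamma_-$ by the substitution $p_1 \mapsto -p_1$, $p_2\mapsto -p_2$ (equivalently applying $P$), checking that the resulting phase still has the form \eqref{mpw} with the stated signs. Everything else is routine Taylor expansion.
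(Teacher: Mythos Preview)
Your approach is essentially identical to the paper's: compute $x(t,q)\cdot p/(mr)$ explicitly, observe it has the form $1 - \tfrac{t\sqrt{p_1^2+m^2} - qp_1}{mr} + O(1/r^2)$, rewrite the power as $\exp\bigl((-\tfrac12\mp imr)\ln(1-\cdots)\bigr)$, and expand the logarithm. Your identification of the sign and $m$-factor bookkeeping (and the $\Gamma_+$ versus $\Gamma_-$ distinction) as the only delicate point is exactly right, and the horospheric-distance interpretation you add is a nice gloss but not needed for the argument.
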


\begin{proof} 
We set $ p  = \left( \begin{smallmatrix}
		\sqrt{p_1^2+ m^2} \\
		p_1 \\
		\pm m
		\end{smallmatrix} \right)$, $ m >0 $. Now consider \eqref{H}. Note that 
$ x \cdot p_\pm  (t) =  \Lambda_1 (-t) x \cdot p_\pm$.
It follows that 
	\begin{align*}
	\lim_{r \to \infty} & \frac{1}{mr}
	\left[ \begin{pmatrix}
										 \frac{q^2}{2r} {\rm e}^{\frac{t}{r}}  - r \sinh \frac{t}{r} \\
										q  \\
										\frac{q^2}{2r} {\rm e}^{\frac{t}{r}} +   r \cosh \frac{t}{r}   
 									\end{pmatrix} \cdot \begin{pmatrix}
		\sqrt{p_1^2+ m^2} \\
		p_1 \\
		\pm m
		\end{pmatrix} \right]^{-\frac{1}{2} \mp i m r }  
	\\
	& = 
	\lim_{r \to \infty} {\rm e}^{ \ln ( 1- \frac{t \sqrt{p_1^2 + m^2}  -  q p_1}{mr}) (-\frac{1}{2} \mp i m r )} \; . 
	\end{align*}
The result now follows by expanding the logarithm \cite[Equ.~(4.5)]{BM}.
\end{proof}

Let $ {\mathscr D}_{\pm m}$ be the unitary irreducible representation of the Poincar\'e group $E_0(1,1)$
for mass $m$ and spin zero induced by the representation 
	\[
		(t,q) \mapsto {\rm e}^{\pm i t \sqrt{p_1^2+ m^2} \mp i q p_1}
	\]
of the translation subgroup and by the representation $\sigma_{+}$ of the little group $\{ \mathbb{1}, P_1 \}$, 
\emph{i.e.}, 
	\[
		\Bigl( {\mathscr D}_{m} \bigl(\Lambda_2(s) T(t, q) \bigr) g_+ \Bigr) (p_1)= {\rm e}^{i (t, q) \cdot (\sqrt{p_1^2+ m^2}, p_1) } 
		g_+ (p_1+ p_1' ) \; , 
		\quad (t, q) \in \mathbb{R}^{1+1} \; ,  
	\]
and $m \sinh s=  p_1' $. Here $T(t, q)$ denotes the space-time translations in $\mathbb{R}^{1+1}$.

The following result is closely related to a theorem on group contractions by Mickelsson and Niederle \cite[Theorem 2]{MN}. 

\begin{theorem}
\label{r-unendlich}
Consider the representation $\widetilde u_{mr}^{+}  \equiv \widetilde u_{mr} $ of $SO_0(1,2)$ acting on the de Sitter space $dS$. 
Let $g \in L^2(\Gamma_1,{\rm d} \mu_{\Gamma_1})$. Then 
	\[
	\lim_{r \to \infty}
	\left\| \widetilde u_{mr} \left( \Lambda_2 \left( s \right) \Lambda_1 \left( \tfrac{t}{r} \right) D \left( \tfrac{q}{r} \right) \right) g
	- {\mathcal D}_{m} \bigl( \Lambda_2 (s) T( t,q) \bigr) g \right\|_{L^2(\Gamma_1,{\rm d} \mu_{\Gamma_1})} \; . 
	\]
Here ${\mathcal D}_{m}$ is the reducible representation of the Poincar\'e group $E_0(1,1)$ given by
	\[
		{\mathcal D}_{m} = {\mathscr D}_{m} \oplus {\mathscr D}_{-m} \; . 
	\]
\end{theorem}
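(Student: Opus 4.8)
The assertion is that this strong-operator limit vanishes. The plan is to reduce to a dense subspace and then combine the explicit realization \eqref{H} of the representation $\widetilde u_{mr}$ on $L^2(\Gamma_1,{\rm d}\mu_{\Gamma_1})\cong L^2(\mathbb{R},\tfrac{{\rm d}s}{2})\oplus L^2(\mathbb{R},-\tfrac{{\rm d}s}{2})$ (see \eqref{eqUBM}) with the Bros--Moschella plane-wave limit. Since all operators occurring are unitary with norm $1$ uniformly in $r$, an $\varepsilon/3$ argument reduces the statement to the case $g=(g_+,g_-)$ with $g_\pm\in C_0^\infty(\mathbb{R})$. It is also worth isolating the trivial part: since $\Lambda_2(s)$ leaves each branch $\Gamma_\pm$ invariant and, by the explicit formula for $\widetilde u_\nu(\Lambda_2(s))$, acts there by the $\nu$-independent rapidity translation $h_\pm(s')\mapsto h_\pm(s'-s)$, the operator $\widetilde u_{mr}(\Lambda_2(s))$ agrees --- for every $r$ --- with ${\mathcal D}_m(\Lambda_2(s))={\mathscr D}_m(\Lambda_2(s))\oplus{\mathscr D}_{-m}(\Lambda_2(s))$ under the identification of $L^2(\Gamma_\pm)$ with the carrier spaces of ${\mathscr D}_{\pm m}$; hence, by the multiplicativity of both representations and unitarity of $\widetilde u_{mr}(\Lambda_2(s))$, it suffices to prove that $\widetilde u_{mr}\bigl(\Lambda_1(t/r)D(q/r)\bigr)g\to{\mathcal D}_m\bigl(T(t,q)\bigr)g$ in $L^2(\Gamma_1,{\rm d}\mu_{\Gamma_1})$.

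For this I would write $g^{(r)}\doteq\Lambda_1(t/r)D(q/r)$ and put $(g^{(r)})^{-1}\Lambda_2(s')=D(-q/r)\Lambda_1(-t/r)\Lambda_2(s')$ into Hannabuss form $\Lambda_2(\tilde s)P^{j}\Lambda_1(\tilde t)D(\tilde q)$, with parameters $\tilde s=\tilde s(r,s')$, $j=j(r,s')$, $\tilde t=\tilde t(r,s')$ depending on $r$. By \eqref{GHK} and \eqref{H},
\[
  \bigl(\widetilde u_{mr}(g^{(r)})g_+\bigr)(s')={\rm e}^{(-\frac12-imr)\tilde t}\,g_{(-)^{j}}(\tilde s)=\bigl|x\cdot p_{\pm}(\tilde t)\bigr|^{-\frac12-imr}\,g_{(-)^{j}}(\tilde s),\qquad x\in P_{\tau=0}.
\]
Because $\Lambda_1(t/r)\to\mathbb{1}$ and $D(q/r)\to\mathbb{1}$, one has $(g^{(r)})^{-1}\Lambda_2(s')\to\Lambda_2(s')$, so $j=0$ and $\tilde s\to s'$ for $r$ large, uniformly for $s'$ in compact sets, while $\tilde t=O(1/r)$. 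The crucial point --- which is precisely the estimate carried out in the proof of the Bros--Moschella Lemma, via \eqref{distance-to-horosphere} and \eqref{h-limit} --- is that although $\tilde t\to0$ and $mr\to\infty$, the product $mr\,\tilde t(r,s')$ converges, so that $\bigl|x\cdot p_\pm(\tilde t)\bigr|^{-\frac12-imr}$ tends to the Minkowski phase ${\rm e}^{it\sqrt{p_1^2+m^2}-iqp_1}$ (with the $\pm$ dictating the signs), $p_1=m\sinh s'$. Comparing with the action of ${\mathscr D}_{\pm m}(T(t,q))$ by multiplication with the same phase yields pointwise a.e.\ convergence in $s'$ of $\bigl(\widetilde u_{mr}(g^{(r)})g_+\bigr)(s')$ to $\bigl({\mathcal D}_m(T(t,q))g_+\bigr)(s')$, and likewise for the second component.

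Pointwise convergence is then upgraded to convergence in $L^2(\Gamma_1,{\rm d}\mu_{\Gamma_1})$ by unitarity: $\|\widetilde u_{mr}(g^{(r)})g\|=\|g\|=\|{\mathcal D}_m(T(t,q))g\|$ for every $r$, so the sequence $\widetilde u_{mr}(g^{(r)})g$ is bounded in $L^2(\Gamma_1,{\rm d}\mu_{\Gamma_1})$ and converges a.e.\ to its pointwise limit; hence it converges weakly to that limit, and the equality of the norms forces norm convergence.

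I expect the main obstacle to be the asymptotic control of the Hannabuss decomposition of $D(-q/r)\Lambda_1(-t/r)\Lambda_2(s')$: one must compute $\tilde t(r,s')$ to leading order in $1/r$ and show that $mr\,\tilde t(r,s')$ converges to the correct Minkowski phase uniformly for $s'$ in the compact support of $g_\pm$ (so that the pointwise limit can be integrated). A secondary technical point is that the Hannabuss decomposition is valid only for almost every group element, but the exceptional set is a null set which is avoided by the elements occurring here for $r$ large, so it causes no difficulty in $L^2(\Gamma_1,{\rm d}\mu_{\Gamma_1})$.
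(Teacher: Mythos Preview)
Your proposal is correct and follows essentially the same strategy as the paper: reduce by density to compactly supported functions, peel off $\Lambda_2(s)$ (which acts identically in both representations as a rapidity shift), then use the explicit Hannabuss-form action \eqref{GHK} together with the Bros--Moschella plane-wave limit to obtain pointwise convergence of $\bigl(\widetilde u_{mr}(\Lambda_1(t/r)D(q/r))g_\pm\bigr)(p_1)$ to ${\rm e}^{\pm i(t\sqrt{p_1^2+m^2}-qp_1)}g_\pm(p_1)$.

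The only substantive difference is in the final upgrade from pointwise to $L^2$ convergence. The paper argues directly that, since $g_\pm\in C_0$ and the rapidity shift $\tilde s\to s'$, the difference $F_r^\pm$ is supported in a fixed compact set $C$ for $r$ large and bounded there, so that $\lim_{r\to\infty}\int F_r^\pm=\int\lim_{r\to\infty}F_r^\pm=0$ by dominated convergence on $C$. Your route---a.e.\ convergence together with $L^2$-boundedness gives weak convergence, and then equality of norms (from unitarity of both operators) forces norm convergence via the Radon--Riesz property of Hilbert space---is equally valid and arguably cleaner, since it avoids tracking the support of $\widetilde u_{mr}(g^{(r)})g_\pm$. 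Either way the ``main obstacle'' you flag (the leading-order asymptotics of $\tilde t(r,s')$ so that $mr\,\tilde t(r,s')$ converges to the Minkowski phase) is exactly what is extracted from the logarithmic expansion in the proof of the Bros--Moschella lemma, and neither your sketch nor the paper spells this out beyond invoking \eqref{H} and that lemma.
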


\begin{proof} We have to show that 
	\begin{equation}
	\label{contraction}
	\left(\widetilde u_{m r} \left( \Lambda_2 (s ) 
		\Lambda_1 ( \tfrac{t}{r} ) D ( \tfrac{q}{r} ) \right) g_\pm \right) (p_1)
		\to  {\rm e}^{\pm i ( t \sqrt{p_1^2 +m^2} -  q p_1 ) } g_\pm ( p_1 + p_1' ) 
	\end{equation}
in $L^2 \bigl(\mathbb{R}, \tfrac{{\rm d} p_1}{2 \sqrt{p_1^2 +m^2}} \bigr)$ as $r \to \infty$, with $p_1'= m \sinh s$.
Note that, by definition, 
	\[
		\widetilde u_{m r} \left( \Lambda_2 ( s ) g_\pm \right) (p_1) = g_\pm  (p_1+p_1') \; .
	\]
Thus it is sufficient to show that 
	\[
	\widetilde u_{m r} \left( 
		\Lambda_1 ( \tfrac{t}{r} ) D ( \tfrac{q}{r} ) \right) g_\pm  
		\mapsto  {\rm e}^{\pm i ( t \sqrt{\, .\,^2 +m^2} -  q \, .\, ) } g_\pm  
		\qquad \text{in $L^2(\Gamma_1,{\rm d} \mu_{\Gamma_1})$ as $r \to \infty$}.
	\]
In order to be able to interchange the limit with the integration, we 
approximate $g_\pm$ with continuous functions with compact support.
Set
	\[
		F^\pm_{r} (p_1) \doteq \left| \left(\widetilde u_{m r} \left( 
		\Lambda_1 ( \tfrac{t}{r} ) D ( \tfrac{q}{r} ) \right) g_\pm  \right) (p_1) - {\rm e}^{\pm i ( t \sqrt{p_1^2 +m^2} -  q p_1 ) } g_\pm (p_1)  \right|^2 \; . 
	\]
It follows that  
	\begin{equation}
	\label{contraction-2}
	\lim_{r \to \infty} \left( \int_{\Gamma_1} {\rm d} p_1 \; F^\pm_{r} (p_1) \right)^{1/2} 
	= \left( \int_C {\rm d} p_1  \lim_{r \to \infty} F^\pm_{r} (p_1) \right)^{1/2} = 0 \; . 
		\end{equation}
We have used \eqref{H} and the fact 
that $F^\pm_{r} (p_1)$ is zero outside of some compact region $C \subset \Gamma_1$ for $t, q$ fixed and $r$ 
sufficiently large. (This follows form  $g_\pm \in C_0(\mathbb{R})$.)
Note that the set  $C_0(\mathbb{R}) \subset L^2 \bigl(\mathbb{R}, \tfrac{{\rm d} p_1}{2 \sqrt{p_1^2 +m^2}} \bigr)$ 
of continuous functions with compact support is dense in $L^2$.  (In Minkowski space, these 
functions would be called finite energy wave-functions.)  Thus \eqref{contraction} follows for  
$g_\pm \in L^2(\Gamma_1,{\rm d} \mu_{\Gamma_1})$.
\end{proof}

\chapter{Harmonic Analysis on the Hyperboloid}
\setcounter{equation}{0}
\label{Harmanaly}

Harmonic analysis on semi-simple Lie groups originated with the monumental work of 
Harish-Chandra \cite{HC1} -- \cite{HC9}. The subject  has been developed further in 
particular by Helgason. But strictly speaking, the framework considered by Helgason~\cite{He}, 
namely  symmetric spaces~$G / H$, does not cover the case of the one-sheeted hyperboloid, 
as the stabilizer $H \cong (\mathbb{R}, +)$ of~the Lie group $G\cong SO_0(1,2)$ fails to be a compact subgroup. 
The  necessary alterations can be found in the work of Molchanov~\cite{Mo1, Mo2} and Faraut~\cite{Fa}.  

In this work we follow a more recent approach to Fourier(-Helgason) transformation on de Sitter space, 
due to Bros and Moschella~\cite{BM2},  which emphasises the role of  {\em tuboids}. The advantage of this 
approach is that the analyticity properties of the Fourier transform are evident from the definition.

\section{Tuboids}
\label{tuboidsds}

Consider  the complex de Sitter space
\label{dSccpage}
	\[  
		dS_{\mathbb{C}} \doteq \{  z \in \mathbb{C}^{1+2} 
		\mid z_0^2 - z_1^2 - z_{2}^2 = - r^2 \} \;  .
	\] 
A {\it tuboid} is a subset of $dS_{\mathbb{C}}$, which is bordered 
by  real de Sitter space $dS$ and whose {\em shape} (called its {\em profile})
near each point $x$ of $dS$ can be mapped to
a  cone ${\mathcal P}_{ x}$ in the tangent space $T_{ x }dS$. The precise definition
of a profile  can be found in \cite{BM}; for the benefit of the reader we recall it 
in Appendix \ref{deftub}. 

\goodbreak
Complex de Sitter space $dS_{\mathbb{C}}$ is equipped \cite{BM2} with four distinguished tuboids, 
which are invariant under the action of the proper orthochronous Lorentz group $SO_0(1,2)$:
	\begin{align*} 
		\label{t+1}  
		{\mathcal T}_\pm &\doteq  \{  \Lambda \; ( i r \sin \theta , 0, r \cos \theta) 
		\in dS_{\mathbb{C}} \mid  0 <  \mp \theta < \pi ,  \;  \Lambda \in SO_0 (1,2)\} \; , 
		 \nonumber \\  
				{\mathcal T}_{\hskip -.1cm {\quad \atop \rightarrow} \atop  \hskip -.1cm \leftarrow} \hskip -.2cm
		&\doteq \{ \Lambda  ( 0, i r \sinh t , r \cosh t) \in dS_{\mathbb{C}} \mid  \mp t >0 ,  \;  \Lambda \in SO_0 (1,2) \} \; . 
	\end{align*}
The {\em chiral} tuboids ${\mathcal T}_{\leftarrow}$ and~${\mathcal T}_{\rightarrow}$ are not simply-connected. Their profiles 
at the origin~$ o = (0,0, r)$ of $dS$ are  the cones
	\[
		   T_{ o} dS \cap \{  y \in \mathbb{R}^{1+2} \mid  \mp y_1 > |y_0|   \} \; . 
	\]
The chiral tuboids play a key role for quantum fields on anti-de Sitter space \cite{BuSu, AdS}, but are of no relevance for this work. 
\label{tuboidspage}

\goodbreak
The {\em Lorentzian  tuboids}  ${\mathcal T}_\pm$ are  similar in many respects to the 
tubes\footnote{Of course, $V_+$ here 
denotes the future light cone in $\mathbb{R}^{1+2}$. Note that our sign convention follows \cite{SW}, 
in contrast to the less common sign convention chosen in~\cite{BM2}.} 
	\begin{equation*}
		{\mathfrak T}_\pm = \mathbb{R}^{1+2} \mp i V^+ 
	\end{equation*}
defined in complex Minkowski space. In fact~\cite[Proposition 2]{BM2}, 
	\begin{equation*} 
 		{\mathcal T}_\pm = {\mathfrak T}_\pm \cap dS_{\mathbb{C}} \; .  
	\end{equation*} 

\begin{proposition}[Proposition 1, \cite{BM2}]
The tuboids ${\mathcal T}_\pm$ consists of all points $z \in dS_{\mathbb{C}}$ for which 
the inequality $\mp \Im z \cdot p >0$ holds for all $p \in \partial V^+$, \emph{i.e.}, 
	\begin{equation} 
		\label{tube-0}
 		{\mathcal T}_\pm = \bigl\{ z \in dS_{\mathbb{C}} \mid  \mp \Im z \cdot p >0 \; \forall p \in \partial V^+ 
		\setminus \{(0,0,0)\}
		 \bigr\} \; .  
	\end{equation} 
\end{proposition}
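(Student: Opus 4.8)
The plan is to prove the two inclusions in \eqref{tube-0} separately, relying on the already-established fact that $\mathcal{T}_\pm = \mathfrak{T}_\pm \cap dS_{\mathbb{C}}$, where $\mathfrak{T}_\pm = \mathbb{R}^{1+2} \mp i V^+$. Writing $z = x + iy$ with $x, y \in \mathbb{R}^{1+2}$, membership $z \in \mathcal{T}_-$ means $y \in V^+$, and $z \in dS_{\mathbb{C}}$ means $z \cdot z = -r^2$, which splits into the real conditions $x \cdot x - y \cdot y = -r^2$ and $x \cdot y = 0$. So the statement to be proved becomes: $y \in V^+$ (together with the two de Sitter constraints) if and only if $\Im z \cdot p = y \cdot p > 0$ for every $p \in \partial V^+ \setminus \{0\}$.

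The forward direction is the easy half. If $y \in V^+$, i.e.\ $y \cdot y > 0$ and $y_0 > 0$, then $y$ lies in the open forward cone; for any $p \in \partial V^+ \setminus \{0\}$ one has $p \cdot p = 0$, $p_0 > 0$, and the standard fact that the Minkowski product of a vector in the open forward cone with a nonzero vector in the closed forward cone is strictly positive gives $y \cdot p > 0$. (Concretely: boost so that $y = (|y|, 0, 0)$ with $|y| = \sqrt{y\cdot y} > 0$; then $y \cdot p = |y| p_0 > 0$ since $p_0 > 0$ for $p \neq 0$ on $\partial V^+$.) This does not even use the de Sitter constraints.

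The reverse direction is where the de Sitter constraints enter, and I expect the mild subtlety to be ruling out the degenerate possibilities for $y$. Suppose $y \cdot p > 0$ for all $p \in \partial V^+ \setminus \{0\}$. First, $y \neq 0$. The set $\partial V^+ \setminus \{0\}$ together with its limit directions spans enough of the cone that $y \cdot p \geq 0$ on all of $\overline{V^+}$ by continuity; a vector whose Minkowski product with the entire closed forward cone is nonnegative must itself lie in $\overline{V^+}$ (this is the self-duality of the light cone under the Minkowski form). So $y \in \overline{V^+}$: either $y \in V^+$, which is what we want, or $y$ is a nonzero null vector in $\partial V^+$, or $y = 0$. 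The case $y = 0$ is excluded since then $z = x \in dS$ is real and $y \cdot p = 0$, contradicting strict positivity. The case $y \in \partial V^+ \setminus \{0\}$ is excluded precisely by taking $p = y$: then $y \cdot p = y \cdot y = 0$, again contradicting strict positivity. Hence $y \in V^+$, i.e.\ $z \in \mathfrak{T}_- \cap dS_{\mathbb{C}} = \mathcal{T}_-$. The argument for $\mathcal{T}_+$ is identical with signs reversed (replace $V^+$ by $V^-$, or equivalently use $+\Im z \cdot p > 0$). The main obstacle, such as it is, is being careful that the strict inequality on the open-dense set $\partial V^+\setminus\{0\}$ still forces $y$ into the \emph{closed} cone and then using the test vector $p = y$ to upgrade to the open cone; no hard estimates are needed, only the self-duality of the light cone and the identity $\mathcal{T}_\pm = \mathfrak{T}_\pm \cap dS_{\mathbb{C}}$ quoted from \cite{BM2}.
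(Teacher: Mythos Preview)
Your proof is correct and follows essentially the same route as the paper: both reduce the claim, via the identity $\mathcal{T}_\pm = \mathfrak{T}_\pm \cap dS_{\mathbb{C}}$, to the self-duality statement $V^+ = \{y : y \cdot p > 0 \text{ for all } p \in \partial V^+ \setminus \{0\}\}$, which the paper phrases geometrically as an intersection of rotated half-spaces and you phrase via cone duality together with the test vector $p = y$ to exclude the null boundary. One small remark: despite your preamble, the de Sitter constraints on $z$ play no role in either direction of your argument (nor in the paper's), so the statement is really about the ambient tubes $\mathfrak{T}_\pm$.
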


\begin{proof}
Consider the vectors $p =(1,0,-1)$ and $q= (0,r,0)$. Since $p \cdot p=0$ and $p \cdot q = 0$, we find 
	\begin{equation}
	\label{zero-plane}
	( \lambda p + \mu q) \cdot p = 0 \; , \qquad \lambda , \mu \in \mathbb{R} \; . 
	\end{equation}
The plane\footnote{We note that this plane contains the light rays
$(0, \pm r,0) + \lambda (1, 0, -1)$, $\lambda \in \mathbb{R}$, which form the horosphere $P_{- \infty }$, see \eqref{horosphere}.}
spanned by $p$ and $q$ separates the regions in $\mathbb{R}^3 \ni x$ for 
which $x \cdot p <0$ and $x \cdot p >0$, respectively. The latter half-space includes the positive $x_0$-axis. 
Rotating the vector $p$ and taking the intersection of the 
resulting regions  for which the scalar product $x \cdot p$
is positive, yields the forward light cone $\partial V_+ \setminus \{(0,0,0)\}$.   
\end{proof}

The  profile of the {\em forward tuboid}  ${\mathcal T}_+$ near each point $ x$ of $dS$ (in the space 
of $\Im  z$ and for $\Im  z \searrow 0$) is the cone
	\begin{equation}  
		\label{px}
		{\mathcal P}^+_{ x}=  T_{ x } dS \cap (-V_+)  
	\end{equation} 
in the tangent space $T_{ x } dS$  at the point $ x \in dS$. 
(Note that in \eqref{px} the tangent space $T_{ x } dS \cong \mathbb{R}^2$ at $ x \in dS$ 
is viewed as a subspace of $T_{ x } \mathbb{R}^3 \cong \mathbb{R}^3$.) For the origin $o \in dS$
this yields ${\mathcal P}^+_{o}=  \{ y \in \mathbb{R}^3 \mid - y_0 > |y_1|, y_2 = 0 \}$.

\subsection{The Euclidean sphere}
Applying\footnote{Applying the boosts $\Lambda_1 (t)$, $t \in \mathbb{R}$, to the half-circles \eqref{half-circle-S1}, followed by  
the rotations $R_{0} (\alpha)$, $\alpha \in [0, 2 \pi)$, yields the interior of the one-sheeted hyperboloid in $( i \mathbb{R}) \times 
\mathbb{R}^2$ with the (closed) past light cone and the interior of the
future mass shell for the value $m=r$ removed.}
the rotations $R_{0} (\alpha)$, $\alpha \in [0, 2 \pi)$, to the half-circles 
	\begin{equation}
	\label{half-circle-S1}
	 \{  ( i r \sin \theta , 0, r \cos \theta) \in dS_{\mathbb{C}} \mid  0 <  \mp \theta < \pi \} \; 
	\end{equation}
we find  the (open) \emph{lower}\index{lower hemisphere} and  
\emph{upper hemispheres}\index{upper hemisphere} 
	\begin{equation} 	
		\label{spl}
		S_\mp =\{( i\lambda_0,x_1, x_2) \in  (i\mathbb{R}) 
		\times \mathbb{R}^2 \mid \lambda_0^2 +  x_1^2 +  x_2^2 = r^2, \mp \lambda_0 > 0\}
	\end{equation} 
of the \emph{Euclidean sphere}\index{Euclidean sphere}\footnote{Note that the definition of 
$S^2$ in \eqref{euclidsphere} refers to the Lorentz 
metric~\eqref{metrik}. }
\label{euclidspherepage}
	\begin{equation} 
		\label{euclidsphere}
		S^2 = \Bigl\{ (i r \sin \theta \cos \psi,  r \sin \psi, r \cos \theta \cos \psi)  \in \mathbb{C}^3 
		\mid \theta \in (- {\textstyle  \frac{\pi}{2}, \frac{\pi}{2}}  ], 
		\psi \in ( - {\textstyle  \frac{\pi}{2}, \frac{3\pi}{2}} ]   \Bigr\} \;  .
	\end{equation} 
Thus ${\mathcal T}_\pm =  \{  \Lambda \; S_\mp \mid    \Lambda \in SO_0 (1,2)\} $  and, consequently,
	\[
		S^2 \subset \overline{{\mathcal T}_+ \cup {\mathcal T}_-} \subset dS_{\mathbb{C}} \; . 
	\]

\subsection{Rotations}
The rotations, which leave the 
Euclidean sphere \eqref{euclidsphere} invariant,
form the subgroup $SO(3)$ of $SO_\mathbb{C} (1,2)$; the imaginary part in the square bracket on 
the right hand side of~\eqref{eqBooW} is in agreement with \eqref{euclidsphere}. 
We denote the generators of the rotations around the three coordinate axis by $K_0$, $K_1$ 
and $K_2$, and  set
	\[
		K^{(\alpha)} \doteq \cos \alpha \; K_1 + \sin \alpha  \; K_2\;  , 
		\qquad \alpha \in [ 0, 2 \pi) \; ,
	\]
in agreement with the definition of $L^{(\alpha)}$ in \eqref{Lalpha}.

\goodbreak
\begin{remark}
Clearly, the decomposition of the Euclidean sphere into a lower and an upper hemisphere in
\eqref{spl} distinguishes a Cauchy surface $S^1= \partial S_\mp$. However, 
as ${\mathcal T}_+$ is invariant under the action of $SO_0 (1,2)$, one might as well 
consider the Lorentz transformed Cauchy surface $\Lambda S^1\subset dS$ together with a 
Lorentz transformed sphere 
	\[
		\Lambda S^2 \subset \overline{{\mathcal T}_+ \cup {\mathcal T}_-} \subset dS_{\mathbb{C}} \; , 
		\qquad \Lambda \in SO_0(1,2) \; . 
	\]
\end{remark}

\begin{lemma} 
\label{lemma1.2}
Let $x = R_{0} (\psi) o \in S^1$, $\psi \in [0, 2 \pi)$. Then 
	\begin{equation} 
		\label{waw}
		\Gamma^\pm ( x  )
		= 	\left\{ 
				{\Lambda^{(\alpha)}}(t)  x  \in dS  
				\mid t  \in \mathbb{R}^\pm \; , \; 
				\alpha \in  ( {\textstyle  \psi - \frac{\pi}{2}, \psi + \frac{\pi}{2}  })  \right\} \; . 
	\end{equation} 
Moreover,  the map 
	\begin{equation} 
		\label{waw1}
		\tau \mapsto{\Lambda^{(\alpha)}} (\tau)   x \; , 
		\qquad \alpha \in (  {\textstyle  \psi-\frac{\pi}{2}, \psi+\frac{\pi}{2} } )  \; , 
	\end{equation} 
is entire,  and for $\tau \in {\mathbb S} = \mathbb{R} \mp i ( 0,\pi ) $   
the map \eqref{waw1} takes values in $ {\mathcal T}_\pm$. 
\end{lemma}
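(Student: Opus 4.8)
The plan is to use $SO_0(1,2)$-covariance to reduce both assertions to the single reference point $x=o$ — in fact, for the second part, to a statement about $\Lambda_1(\tau)$ alone — and then to read everything off the explicit matrix formula for $\Lambda^{(\alpha)}(t)$ and its analytic continuation~\eqref{eqBooW}.

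\emph{Reduction.} Since $R_0(\psi)\in SO_0(1,2)$ is an isometry, $\Gamma^\pm(x)=\Gamma^\pm(R_0(\psi)o)=R_0(\psi)\Gamma^\pm(o)$; and because $R_0(\psi)\Lambda^{(\beta)}(t)R_0(-\psi)=\Lambda^{(\psi+\beta)}(t)$ (immediate from $\Lambda^{(\beta)}(t)=R_0(\beta)\Lambda_1(t)R_0(-\beta)$ and $R_0(\psi)R_0(\beta)=R_0(\psi+\beta)$), one has $R_0(\psi)\bigl\{\Lambda^{(\beta)}(t)o\bigr\}=\bigl\{\Lambda^{(\psi+\beta)}(t)x\bigr\}$. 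Relabelling $\alpha=\psi+\beta$ then carries~\eqref{waw} to the same statement for $x=o$ (now with $\alpha\in(-\tfrac{\pi}{2},\tfrac{\pi}{2})$). For~\eqref{waw1} one likewise has $\Lambda^{(\alpha)}(\tau)x=R_0(\psi)\Lambda^{(\alpha-\psi)}(\tau)o$, and since ${\mathcal T}_\pm$ is $SO_0(1,2)$-invariant it suffices to treat $x=o$; writing $\Lambda^{(\beta)}(\tau)o=R_0(\beta)\Lambda_1(\tau)q$ with $q=R_0(-\beta)o=(0,r\sin\beta,r\cos\beta)\in S^1$ and using $SO_0(1,2)$-invariance once more, it remains only to show that $\Lambda_1(\tau)q\in{\mathcal T}_\pm$ for $\beta\in(-\tfrac{\pi}{2},\tfrac{\pi}{2})$ and $\tau\in{\mathbb S}$.

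\emph{Part 1.} A direct computation gives
\[
\Lambda^{(\alpha)}(t)o-o=r\cos\alpha\,\bigl(\sinh t,\ \sin\alpha(1-\cosh t),\ \cos\alpha(\cosh t-1)\bigr),
\]
whence $\bigl(\Lambda^{(\alpha)}(t)o-o\bigr)\cdot\bigl(\Lambda^{(\alpha)}(t)o-o\bigr)=2r^2\cos^2\alpha\,(\cosh t-1)\ge 0$, while the $x_0$-component has the sign of $t$ (recall $\cos\alpha>0$ on $(-\tfrac{\pi}{2},\tfrac{\pi}{2})$). Hence $\Lambda^{(\alpha)}(t)o\in\Gamma^\pm(o)$ exactly for $t\in{\mathbb R}^\pm$, which is the inclusion $\supseteq$ in~\eqref{waw}. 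For $\subseteq$ I would use $(y-o)\cdot(y-o)=2r(y_2-r)$, so that $y\in\Gamma^+(o)$ is equivalent to $y_2\ge r$ and $y_0\ge 0$; when $y_2>r$ the equation $\tan\alpha=y_1/(r-y_2)$ has a unique solution $\alpha\in(-\tfrac{\pi}{2},\tfrac{\pi}{2})$, then $\cosh t-1=(y_2-r)/(r\cos^2\alpha)$ fixes a unique $t>0$, and the $dS$-identity $y_0^2=y_1^2+y_2^2-r^2$ shows $y=\Lambda^{(\alpha)}(t)o$; the apex is $t=0$, and the two bounding null rays ($y_2=r$) are recovered only in the limit $\alpha\to\pm\tfrac{\pi}{2}$, $|t|\to\infty$, so~\eqref{waw} is to be read with $\Gamma^\pm$ denoting the closure of the chronological future/past. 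The case $\Gamma^-$ is identical with $t\le 0$.

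\emph{Part 2.} Entireness of $\tau\mapsto\Lambda^{(\alpha)}(\tau)x=R_0(\alpha)\Lambda_1(\tau)R_0(-\alpha)x$ is immediate from~\eqref{eqBooW}, and the values lie in $dS_{\mathbb{C}}$ because $\Lambda^{(\alpha)}(\tau)\in SO_{\mathbb{C}}(1,2)$ preserves the quadratic form. For the tuboid membership I use the Bros--Moschella characterization ${\mathcal T}_\pm=\{z\in dS_{\mathbb{C}}\mid \mp\,\Im(z\cdot p)>0\ \ \forall\,p\in\partial V^+\setminus\{(0,0,0)\}\}$ (the Proposition after~\eqref{tube-0}). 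Writing $\tau=t+i\theta$ and using $\Lambda_1(\tau)q\cdot p=q\cdot\Lambda_1(-\tau)p$ together with~\eqref{eqBooW}, one finds $\Im\bigl(\Lambda_1(-\tau)p\bigr)=-\sin\theta\,\bigl((\Lambda_1(-t)p)_2,\ 0,\ (\Lambda_1(-t)p)_0\bigr)$; pairing this with the real vector $q=(0,r\sin\beta,r\cos\beta)$ gives
\[
\Im\bigl(\Lambda_1(\tau)q\cdot p\bigr)=r\cos\beta\;\bigl(\Lambda_1(-t)p\bigr)_0\,\sin\theta .
\]
Since $\cos\beta>0$ and $(\Lambda_1(-t)p)_0>0$ (because $\Lambda_1(-t)p\in\partial V^+\setminus\{(0,0,0)\}$), the sign of $\Im(\Lambda_1(\tau)q\cdot p)$ equals that of $\sin\theta$ for \emph{every} admissible $p$; therefore $\Lambda_1(\tau)q\in{\mathcal T}_+$ iff $\sin\theta<0$ iff $\tau\in{\mathbb R}-i(0,\pi)$, and $\Lambda_1(\tau)q\in{\mathcal T}_-$ iff $\sin\theta>0$ iff $\tau\in{\mathbb R}+i(0,\pi)$, i.e.\ precisely when $\tau\in{\mathbb S}$. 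Unwinding the reductions then finishes the proof.

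\emph{The main obstacle.} None of the individual computations is hard; the delicate point is purely bookkeeping — keeping the several sign conventions mutually consistent ($\Gamma^\pm$ versus ${\mathbb R}^\pm$, ${\mathcal T}_\pm$ versus ${\mathbb S}={\mathbb R}\mp i(0,\pi)$, and the signs in~\eqref{eqBooW}) — together with the boundary subtlety in Part~1 noted above, where the two null boundary rays of $\Gamma^\pm(x)$ are limits of, rather than members of, the family $\{\Lambda^{(\alpha)}(t)x\}$ with $\alpha$ in the open interval.
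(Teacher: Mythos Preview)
The paper states this lemma without proof (it is followed only by a remark and some heuristic discussion of the Unruh temperature), so there is nothing to compare against directly. Your argument is correct and is in fact the natural one: reduce by $SO_0(1,2)$-covariance to $x=o$, compute $\Lambda^{(\alpha)}(t)o$ explicitly for Part~1, and use the Bros--Moschella criterion~\eqref{tube-0} together with~\eqref{eqBooW} for Part~2. Both computations check out; in particular your formula $\Im\bigl(\Lambda_1(\tau)q\cdot p\bigr)=r\cos\beta\,\bigl(\Lambda_1(-t)p\bigr)_0\sin\theta$ is correct and immediately gives the right sign dichotomy.

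Two comments. First, the boundary issue you flag in Part~1 is genuine: with $\Gamma^\pm(x)$ defined via the \emph{closed} cone $\overline{V^+}$ as in~\eqref{lklkl}, the two null rays bounding $\Gamma^\pm(x)$ are not of the form $\Lambda^{(\alpha)}(t)x$ for any $\alpha$ in the \emph{open} interval (your own computation shows $\bigl(\Lambda^{(\alpha)}(t)o-o\bigr)^2=2r^2\cos^2\alpha(\cosh t-1)>0$ for $t\neq0$), so~\eqref{waw} should be read either with the interior of $\Gamma^\pm$ on the left or with the closure of the right-hand side. This is a wrinkle in the statement, not in your proof. Second, the explicit decomposition $\Lambda^{(\alpha)}(\tau)R_0(\psi)o=u+iy$ that the paper writes down in~\eqref{3.16}--\eqref{3.18} while proving the next lemma (Lemma~\ref{t+2}) would give an alternative, more coordinate-based route to your Part~2: one reads off directly that $y$ is future- or past-directed timelike according to the sign of $\sin\theta\cos(\psi-\alpha)$, which for $\alpha$ in the stated open interval recovers exactly your conclusion. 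Your use of~\eqref{tube-0} is cleaner and avoids that computation.
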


\begin{remark} 
Given an arbitrary point $ x \in dS$, formulas analogous to \eqref{waw} 
and~\eqref{waw1} hold true for all possible choices of  space-like geodesics passing through
the point~$ x$. 
Note that a space-like geodesic is used to define ${\Lambda^{(\alpha)}}$. 
\end{remark}

For $\alpha \ne 0$ the map $\mathbb{R} \ni t \mapsto {\Lambda^{(\alpha)}}(t) o$ no longer describes the geodesic 
motion of a free falling observer. As $\alpha \to \pm \pi /2$, the observer following the path 
$\{ {\Lambda^{(\alpha)}}(t) o \mid t \in \mathbb{R} \}$ is exposed to a {\em uniformly accelerated motion}, namely 
a boost, and will  observe a temperature~$\bigl( (2 \pi) \cos \alpha \bigr)^{-1}$. This result follows by parameterising 
the path~\eqref{waw} in the proper time, see~\eqref{Lalpha} and also~\cite{NPT}.
In other words,  the result of {\em Bisognano-Wichmann}~\cite{BiWia, BiWib} and Unruh~\cite{Unruh} 
remains valid on $dS$ (see also~\cite{DeBeM}). 

\begin{lemma} 
\label{t+2}
Let 
	$
	M \doteq \{ (\psi, \alpha) \in S^1 \times S^1\mid   | \alpha -\psi  | < \pi /2 \}
	$
be the double twisted M\"obius strip. Here $| \alpha -\psi  |$ denotes the minimal distance on~$S^1$. 
The map
	\begin{align} 
		\label{ganzetube}
				 {\mathbb S}  \times  M 
				& \to  {\mathcal T}_+ \nonumber \\ 
				 (\tau, \psi, \alpha) & \mapsto  {\Lambda^{(\alpha)}} (\tau) R_{0} (  \psi)  o   
	\end{align} 
is surjective and, if restricted to   $- \pi/2 < \Im \tau <0$,  it is one-to-one onto 
$ {\mathcal T}_+ \setminus \{  z \in dS_\mathbb{C} \mid \Re  z = 0  \} $. 
\end{lemma}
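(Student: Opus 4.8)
The plan is to prove surjectivity first, then injectivity on the restricted domain. For \textbf{surjectivity}, I would combine two facts already available in the excerpt. First, by the $SO_0(1,2)$-invariance of ${\mathcal T}_+$ together with the Cartan/Iwasawa-type parametrisations, every $z\in{\mathcal T}_+$ lies in $\Lambda S_-$ for some $\Lambda\in SO_0(1,2)$ (this is exactly the relation ${\mathcal T}_\pm=\{\Lambda S_\mp\mid\Lambda\in SO_0(1,2)\}$ recorded just before Lemma~\ref{lemma1.2}). Points of $S_-$ have the form $(ir\sin\theta\cos\psi, r\sin\psi, r\cos\theta\cos\psi)$ with $-\pi/2<\theta<0$, which is $\Lambda^{(\pi/2)}(\,\cdot\,)$ applied to... more cleanly, I would instead argue directly from Lemma~\ref{lemma1.2}: for a fixed base point $x=R_0(\psi)o\in S^1$, the map $\tau\mapsto\Lambda^{(\alpha)}(\tau)x$ with $\alpha\in(\psi-\tfrac\pi2,\psi+\tfrac\pi2)$ and $\tau\in{\mathbb S}$ already sweeps out a piece of ${\mathcal T}_+$, namely the complexification of $\Gamma^+(x)$. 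Letting $\psi$ range over $S^1$ as well, the union $\bigcup_{\psi}\overline{\Gamma^+(R_0(\psi)o)}$ covers $dS$ (each point of $dS$ is in the causal future of \emph{some} point of the Cauchy surface $S^1$, since $S^1$ is a Cauchy surface), and correspondingly the complexified pieces cover ${\mathcal T}_+$. The bookkeeping point is that the index set for the pair $(\psi,\alpha)$ with $|\alpha-\psi|<\pi/2$ is precisely the Möbius strip $M$, so the domain ${\mathbb S}\times M$ is the right total space.

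For \textbf{injectivity} on the domain $-\pi/2<\Im\tau<0$ with image in ${\mathcal T}_+\setminus\{\Re z=0\}$, suppose ${\Lambda^{(\alpha)}}(\tau)R_0(\psi)o={\Lambda^{(\alpha')}}(\tau')R_0(\psi')o$. I would exploit the structure of $z={\Lambda^{(\alpha)}}(\tau)R_0(\psi)o$ as an $SO_0(1,2)$-orbit point of the specific complex vector $(ir\sinh u, 0, r\cosh u)$-type (after conjugating $L^{(\alpha)}$ back to $L_1$ or $L_2$): concretely, writing $\tau=t-i\vartheta$ with $0<\vartheta<\pi/2$, the real and imaginary parts $\Re z,\Im z\in\mathbb{R}^3$ are determined, and one can read off invariants. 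The cleanest invariants are: (i) $\Im z\cdot\Im z$ together with $\Re z\cdot\Re z$ and $\Re z\cdot\Im z$, which are $SO_0(1,2)$-scalars and pin down $\vartheta$ (hence $\Im\tau$) and constrain $t$; (ii) the \emph{plane} spanned by $\Re z$ and $\Im z$ in $\mathbb{R}^3$, which is the complexified geodesic plane and recovers the space-like geodesic $\{{\Lambda^{(\alpha)}}(s)R_0(\psi)o\}$, hence the line in $S^1$ it meets, which determines $\psi$ (the base point is the real point of $S^1$ on that geodesic) and the boost direction $\alpha$. The exclusion of $\{\Re z=0\}$ is exactly what guarantees $\Re z$ and $\Im z$ are linearly independent, so the plane — and with it the geodesic — is well-defined; on $\Re z=0$ one would lose the base-point information and injectivity genuinely fails. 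Once $\psi,\alpha,\Im\tau$ are fixed, $\Re\tau=t$ follows from, say, the value of $\Re z$ along the geodesic, using that $t\mapsto{\Lambda^{(\alpha)}}(t)$ acts freely.

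The step I expect to be the \textbf{main obstacle} is making the "recover the geodesic plane, hence $(\psi,\alpha)$" argument fully rigorous, in particular handling the double-valuedness encoded in $M$ being a \emph{doubly twisted} Möbius strip: the pair $(\psi,\alpha)$ and a "flipped" pair describing the same geodesic traversed oppositely could a priori give the same $z$, and one must check that the sign conventions in $\Lambda^{(\alpha)}(t)=R_0(\alpha)\Lambda_1(t)R_0(-\alpha)$ together with the restriction $|\alpha-\psi|<\pi/2$ (which orients the geodesic and selects which endpoint is the base point) remove this ambiguity. I would organise this by first treating the model case $\psi=0$, where $z={\Lambda^{(\alpha)}}(\tau)o$ and everything reduces to the explicit $3\times3$ matrix computation with $\Lambda_1$ and $R_0$, extract the uniqueness of $(\alpha,\tau)$ there from the formulas for $\sin\alpha,\cos\alpha$ and $\cosh,\sinh$ of real and imaginary parts, and then reduce the general case to it by the transitivity of $SO_0(1,2)$ on $S^1$ via $R_0$. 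The analyticity/entirety of $\tau\mapsto{\Lambda^{(\alpha)}}(\tau)R_0(\psi)o$ and the tuboid membership for $\Im\tau\in(-\pi,0)$ are already supplied by Lemma~\ref{lemma1.2} and \eqref{eqBooW}, so no new analytic input is needed — the content is entirely the geometric bijection bookkeeping.
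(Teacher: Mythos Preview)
Your injectivity argument via invariants --- recovering $\Im\tau$ from $y\cdot y$, the boost direction $\alpha$ and the real parameter $t$ from the time-like direction $y/|y|$, and finally $\psi$ from the residual information in $u$ --- is sound and is a genuinely different route from the paper's. The paper simply writes out the explicit real and imaginary parts $u,y$ of $\Lambda^{(\alpha)}(\tau)R_0(\psi)o$ (its equations (3.17)--(3.18)) and then says ``a short calculation shows $\psi'=\psi$, $\alpha'=\alpha$, $\tau'=\tau$.'' In effect the paper inverts coordinate formulas while you invert Lorentz-invariant data; both encode the same information, but your phrasing makes transparent \emph{why} the exclusion $\Re z\neq 0$ is needed (the $u$--$y$ plane collapses exactly there). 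Your worry about the Möbius double-valuedness is the right one, and your proposed reduction to $\psi=0$ via $R_0$-conjugation is exactly how one resolves it.

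The genuine gap is in your surjectivity argument. The step ``the real futures $\bigcup_\psi\Gamma^+(R_0(\psi)o)$ cover $dS$, and correspondingly the complexified pieces cover $\mathcal{T}_+$'' is a non sequitur: there is no a priori reason the analytic strips $\{\Lambda^{(\alpha)}(\tau)R_0(\psi)o:\tau\in\mathbb S\}$ should exhaust the tuboid just because their real boundaries exhaust $dS$. The paper avoids this by working the other way: once the explicit formulas for $(u,y)$ are on the table, one checks directly that for any $z=u+iy\in\mathcal T_+\setminus\{\Re z=0\}$ (so $-y\in V^+$, $u\cdot y=0$, $0<y\cdot y<r^2$, $u$ space-like) the system (3.17)--(3.18) can be solved for $(t,\theta,\psi,\alpha)$ with $-\pi/2<\theta<0$ and $|\psi-\alpha|<\pi/2$. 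Your invariant framework would give the same conclusion, but you must actually carry it out: show that the unit past time-like vector $-y/\sqrt{y\cdot y}$ determines a unique $(\alpha,t)$, that $\theta=-\arcsin(\sqrt{y\cdot y}/(r\cos(\psi-\alpha)))$ is then consistent, and that the rescaled $x=u/\sqrt{1-y\cdot y/r^2}\in dS$ lies on $S^1$ after undoing the boost --- i.e.\ that $\Lambda^{(\alpha)}(-t)x\in S^1$, which fixes $\psi$. That closes the loop; the covering-of-$dS$ heuristic does not.
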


\goodbreak
\begin{proof} 
Let $\tau=t +i\theta$, with $- \pi/2 < \theta <0$. Then
	\begin{equation} 
		\label{3.16}
		{\Lambda^{(\alpha)}} (\tau) R_{0} ( \psi)  o =  u +i y
	\end{equation} 
 with
	\begin{equation} 
		\label{3.17}
 u = \kern - .1cm 
		\begin{pmatrix}
			\cos (\psi- \alpha) \cos \theta \sinh t \\
			- \cos \alpha \sin (\psi- \alpha)- \sin \alpha \cos (\psi- \alpha) \cos \theta \cosh t \\
			- \sin \alpha \sin (\psi- \alpha)- \cos \alpha \cos (\psi- \alpha) \cos \theta \cosh t  \\
		\end{pmatrix}
	\end{equation} 
and
	\begin{equation} 
		\label{3.18}
		    y = \sin \theta \, \cos (\psi - \alpha) 
		   	\begin{pmatrix}
					\cosh t \\
					- \sin \alpha \sinh t \\
					\cos \alpha \sinh t \\
			\end{pmatrix} \; . 
	\end{equation} 
The vector $y$ is time-like, i.e.,  $0 \le  y \cdot y \le 1$, and 
	\[
		 x =\frac{1}{\sqrt{1-  y \cdot y }} \;  u \in dS \; . 
	\]
Moreover, $  u  \cdot  y = 0$. The equality $  u \cdot  y = 0$ implies that $ u + i  y \in dS_{\mathbb{C}}$, as 
	\begin{equation} 
		\label{rmp3.20}
		dS_{\mathbb{C}} = \left\{ ( u,  y) \in \mathbb{R}^6 \mid  
		u \cdot u -  y \cdot y = -1,   u \cdot  y =  0 \right\} \; .
	\end{equation} 
Now assume that  $u+i y$ can be written (see \eqref{3.16}) as $\Lambda^{(\alpha')} (\tau') R_{0} ( \psi') o$. 
A short calculation, using \eqref{3.17} and~\eqref{3.18}, shows that $\psi'=  \psi$, $\alpha' = \alpha$ and 
$ \tau' = \tau$, using  the restriction $-\frac{\pi}{2}< \Im \tau, \Im \tau' < 0$ to ensure the latter equality. Thus 
there are no further ambiguities, and uniqueness of the restriction follows.
\end{proof}

The coordinates provided by the map \eqref{ganzetube} can not be extended  to the whole  
tuboid~${\mathcal T}_+$: the south pole of the Euclidean sphere $(i, 0, 0) \in S^2$  would 
correspond to  $\theta= - \frac{\pi}{2}$ and $\psi \in S^1$. Similarly the coordinate system would  
be degenerated at every single point in the purely imaginary negative unit  mass-shell (compare to 
Eq.~\eqref{rmp3.20})
	\[
		\{  z \in {\mathcal T}_+ \mid \Re  z = 0  \} 
		= \{ \Lambda (-i,0,0) \mid \Lambda \in SO_0(1,2) \}\; .
	\]
For $ - \pi < \theta < 0$,  $\theta \ne \frac{\pi}{2}$, the identity 
	\[ 
		\Lambda^{(\psi)}(i\theta) R_{0} ( \psi)  o 
			= \Lambda^{(\psi  + \pi )}(i (-\pi- \theta)) R_{0} ( \psi + \pi )  o 
	\]
exemplifies the two possibilities to reach a single point on the Euclidean sphere 
(within the tuboid ${\mathcal T}_+$) from the circle~$S^1$; the two points $R_{0} ( \psi) o$ 
and $R_{0} ( \psi + \pi ) o$ are opposite to each other on the circle, and 
	\[
		\Lambda^{(\alpha + \pi) }(\tau) =  {\Lambda^{(\alpha)}} (-\tau)
	\]
for all $\tau \in \mathbb{C}$ with $\Re \tau = 0$.  

\begin{lemma}
\label{eqzWsx} 
For every point $z\in{\mathcal T}_+$ one can find two wedges $ W_1$, $W_2$, close to each 
other\footnote{Two wedges are {\em close} to each other if their edges are close to each other 
in $dS$ w.r.t.~the Euclidean~metric.}, and two angles $\theta_1,  \theta_2 \in(0,\pi)$
as well as two points $x_1 \in W_1$ and $x_2 \in W_2$ such that   
\begin{itemize} 
\item[$i.)$] $\Lambda_{W_1}(i\theta_0)x_1 = z = \Lambda_{W_2}(i \theta_2)  x_2$; 
\item[$ii.)$] the map   
	\begin{equation} 
		\label{eqHoloChart}
		(\tau_1, \tau_2) \mapsto \Lambda_{W_2}(\tau_2) \Lambda_{W_1}(\tau_1) z 
	\end{equation} 
gives rise to a holomorphic chart in a neighbourhood of $z$. 
\end{itemize}
\end{lemma}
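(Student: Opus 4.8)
The plan is to build the desired holomorphic chart by exhibiting, for each $z \in {\mathcal T}_+$, two almost-parallel wedges whose associated boosts act along two \emph{linearly independent} complexified tangent directions at $z$, and then to invoke the inverse function theorem for holomorphic maps. First I would use Lemma~\ref{t+2}: writing $z = \Lambda^{(\alpha)}(\tau) R_0(\psi) o$ with $\tau = t + i\theta$, $-\pi/2 < \theta < 0$, the point $z$ lies on the $SO_0(1,2)$-orbit of a single boost applied to a point of $S^1$. Since $\Lambda^{(\alpha)}(\tau) R_0(\psi)o = \Lambda_{W}(\tau) x$ where $x = \Lambda^{(\alpha)}(t)R_0(\psi)o \in W$ and $W = W^{(\alpha)}$ shifted by $R_0(\psi)$ (more precisely $W = R_0(\psi) W^{(\alpha - \psi)}$, the wedge whose invariant boost is $\Lambda^{(\alpha)}$ translated along $S^1$), this already produces \emph{one} wedge $W_1 \doteq W$, one angle $\theta_1 \doteq -\theta \in (0,\pi)$, and one point $x_1 \doteq x \in W_1$ with $\Lambda_{W_1}(i\theta_1) x_1 = z$ after absorbing the real boost parameter $t$ into $x_1$ (since $\Lambda_{W_1}(t)$ maps $W_1$ to itself and $x_1 \in W_1$). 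Actually it is cleaner to keep $z = \Lambda_{W_1}(i\theta_1) x_1$ exactly, so I would set $x_1 = \Lambda^{(\alpha)}(t) R_0(\psi) o$ and note $\Lambda_{W_1}(i\theta_1) = \Lambda^{(\alpha)}(i\theta_1)$ when $W_1$ has edge on the appropriate point.

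Next I would produce the second wedge $W_2$ by a small perturbation. The key computational input is the derivative of \eqref{eqHoloChart} at $(\tau_1,\tau_2) = (i\theta_1, 0)$: its two columns are $\partial_{\tau_1}\big|_{i\theta_1} \Lambda_{W_1}(\tau_1) z$ and $\partial_{\tau_2}\big|_0 \Lambda_{W_2}(\tau_2) z$, i.e.\ $L_{W_1}^{\mathbb C} z$ and $L_{W_2}^{\mathbb C} z$ where $L_W$ is the Killing generator of $\Lambda_W$ (as a complex-linear vector field on $dS_{\mathbb C}$, tangent to $dS_{\mathbb C}$ at $z$). These span the (complex, two-dimensional) tangent space $T_z dS_{\mathbb C}$ iff $L_{W_1}^{\mathbb C} z$ and $L_{W_2}^{\mathbb C} z$ are linearly independent over ${\mathbb C}$. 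Since $z \notin \{\Re z = 0\}$ is excluded only on a lower-dimensional set — and even there one argues by continuity/density or uses the alternate representation at the end of the excerpt — the vector $L_{W_1}^{\mathbb C} z$ is nonzero (the boost does move $z$, as $z$ is not on the edge of $W_1$ nor on its complexified fixed-point set). Then the set of generators $L_W$ that are \emph{proportional} to $L_{W_1}$ at $z$ is a proper algebraic subvariety of the manifold of wedges (equivalently of their edges $\partial W \cong dS$), so generic small perturbations $W_2$ of $W_1$ give $L_{W_2}^{\mathbb C} z$ independent of $L_{W_1}^{\mathbb C} z$. I would make this quantitative by an explicit $3\times 3$ computation: writing $L_{W} = \cos\beta\, L_1^{(edge)} + \cdots$ for the one-parameter family of boost generators as the edge varies, the $2\times 2$ minors of the $3\times 2$ matrix $[\,L_{W_1}^{\mathbb C} z \mid L_{W_2}^{\mathbb C} z\,]$ (restricted to the tangent plane) are real-analytic in the edge-parameter of $W_2$ and not all identically zero, so they are nonzero for $W_2$ close to but distinct from $W_1$.

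Having independence of the derivative, the holomorphic inverse function theorem gives that \eqref{eqHoloChart}, viewed as a map from a neighbourhood of $(i\theta_1, 0) \in {\mathbb C}^2$ into $dS_{\mathbb C}$, is a biholomorphism onto a neighbourhood of $z$; composing with the translation $(\tau_1,\tau_2) \mapsto (\tau_1 - i\theta_1, \tau_2)$ recenters it so that $(0,0) \mapsto z$, proving $ii.)$. For $i.)$ I need $z = \Lambda_{W_1}(i\theta_1) x_1 = \Lambda_{W_2}(i\theta_2) x_2$ with $x_j \in W_j$ and $\theta_j \in (0,\pi)$: the first equality is the construction above; for the second, since $W_2$ is close to $W_1$ and the map $\theta \mapsto \Lambda_{W_2}(\tau) x$, $x \in W_2$, sweeps out ${\mathcal T}_+$ near $z$ by the same argument as in Lemma~\ref{t+2} applied to the geodesic inside $W_2$ (Remark after Lemma~\ref{lemma1.2}: analogous formulas hold for every space-like geodesic), there exist $\theta_2 \in (0,\pi)$ and $x_2 \in W_2$ realizing $z$; continuity in $W_2$ keeps $\theta_2$ bounded away from $0$ and $\pi$ provided $\theta_1$ was, which we arrange by choosing $\theta$ in Lemma~\ref{t+2} in a compact subinterval of $(0,\pi)$.

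The main obstacle I anticipate is controlling the degenerate locus — the set where $L_{W_1}^{\mathbb C} z$ and $L_{W_2}^{\mathbb C} z$ fail to be independent for \emph{all} nearby $W_2$, or where $L_{W_1}^{\mathbb C} z = 0$. The latter happens exactly when $z$ lies on the (complexified) set of fixed points of the boost $\Lambda_{W_1}$, which includes the two light rays through the edge and, after complexification, the imaginary unit mass-shell $\{\Re z = 0\}$ flagged in the excerpt. The clean way around this is: Lemma~\ref{t+2} already gives us the freedom to choose the \emph{first} wedge $W_1$ among a whole Möbius-strip family $\{W^{(\alpha)}\}$ of wedges passing appropriately near $z$; for $z \notin \{\Re z = 0\}$ one shows $L_{W_1}^{\mathbb C} z \neq 0$ for at least one such choice by the explicit formulas \eqref{3.17}--\eqref{3.18} (the real and imaginary parts $u, y$ there are generically not both eigenvectors of the generator). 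For the exceptional set $\{\Re z = 0\} \cap {\mathcal T}_+$ one uses instead the second representation of points on the sphere noted right before the lemma, $\Lambda^{(\psi)}(i\theta) R_0(\psi) o = \Lambda^{(\psi+\pi)}(i(-\pi-\theta)) R_0(\psi+\pi)o$, together with the $\theta \to \pi/2$ limit of nearby non-exceptional charts, to patch a holomorphic chart across. I would present the generic case in full and dispatch the exceptional locus with this limiting/patching remark.
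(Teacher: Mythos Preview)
Your approach is correct in outline and reaches the same conclusion, but the paper's proof is considerably more streamlined and avoids the degenerate-locus issue you worry about. Rather than starting from the surjective parametrisation of Lemma~\ref{t+2} (which forces you to treat $\{\Re z=0\}$ separately), the paper invokes a normalisation result \cite[Lemma~A.2]{MSY}: for \emph{every} $z\in{\mathcal T}_+$ there is some $\Lambda\in SO_0(1,2)$ and $\theta_1\in(0,\pi)$ with $\Lambda^{-1}z=\Lambda_1(i\theta_1)\,o$. This gives $W_1=\Lambda W_1^{\text{std}}$ and $x_1=\Lambda\,o$ uniformly, with no exceptional set. More importantly, it yields a clean geometric criterion for linear independence: writing $z=u+iy$, the boost $\Lambda_{W_1}$ by construction preserves the real $2$-plane in $\mathbb{R}^{1+2}$ spanned by $u$ and $y$ (indeed $L_{W_1}u\parallel y$ and $L_{W_1}y\parallel u$), so $L_{W_1}z$ lies in the complexified $u$-$y$-plane. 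One then simply picks any nearby wedge $W_2$ with $L_{W_2}z$ \emph{not} in that plane (noting this excludes $W_2=R_0(\alpha)W_1$), and independence is immediate --- no algebraic-variety genericity argument is needed. The existence of $x_2,\theta_2$ follows as you say, by continuity from the $W_1$ case. Your variety argument works, but the paper's plane-preservation observation is the sharper tool here; it also makes transparent why $L_{W_1}z\neq 0$ always (since $y\neq 0$ in ${\mathcal T}_+$). A minor slip: the derivative of \eqref{eqHoloChart} should be taken at $(\tau_1,\tau_2)=(0,0)$, not $(i\theta_1,0)$, since the map is applied to $z$ itself; your subsequent ``recentering'' remark suggests you were tacitly working with the map applied to $x_1$ instead.
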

 
\begin{proof}
It is known that for every $z\in{\mathcal T}_+$ there exists~\cite[Lemma
A.2]{MSY} an angle $\theta_1 \in(0,\pi)$ and some  $\Lambda\in SO_0(1,2)$ such that 
	\[
	\Lambda^{-1}\,	z =  z_0 =	   	\begin{pmatrix}
					0 \\
					0 \\
					r \\
			\end{pmatrix}  
\cos \theta_1 +i 		   	\begin{pmatrix}
					r \\
					0 \\
					0 \\
			\end{pmatrix}  
\sin \theta_1 \in S_+ \; , 
	\]
i.e.,  $z_0=\Lambda_1(i\theta_1)o$, with
$o=(0,0,r)$ the origin; see also \eqref{eqBooW}. Hence $z$ is of the form as claimed in~(i), namely 
$z=\Lambda_{W_1}(i\theta_1)x_1$, with $W=\Lambda W_1$ and
$x_1=\Lambda x_0$. 

It is noteworthy that $\theta_1$, $x_1$ and $W_1$ can be directly characterized
in a coordinate-independent manner: 
let $z=u+iy$. The real part $u$ satisfies 
$\tfrac{u \cdot u}{r^2} \in (-1,0]$ and is orthogonal to $y$.  Then 
	\[
	\theta_1 =  \arccos  \sqrt{-\frac{ u\cdot u}{r^2} } \; , \qquad   x_1 = 
	\frac{ u }{ r \cos \theta_1 }  \; ,
	\]
and $W_1$ is the causal completion of the unique time-like geodesic in
$dS$ starting at $x_1$ with initial velocity
$y$. (Note that $y$ is orthogonal to $x_1$ and can therefore be
identified with a tangential vector at $x_1$.) 

By construction, the boosts $\Lambda_{W_1} (t)$ leave the $u$-$y$-plane in the ambient space
$\mathbb{R}^{1+2}$ invariant. Hence the generator $L_{W}$
leaves the complex $u$-$y$-plane in ambient $\mathbb{C}^{3}$ invariant\footnote{In fact,  
$L_{W} u \parallel y$ and  $L_W y \parallel u$.}. 
Now pick a different wedge $W_2$ sufficiently close to $W_1$ and such that the
vector $L_{W_2} z$ is not in the $u$-$y$-plane. (This implies that $W_2 \ne R_{0} (\alpha) W_1$ for all 
$\alpha \in [0, 2 \pi)$.) 
It follows that the vectors $L_{W_1}z$ and $L_{W_2}z$ are 
linearly independent and \eqref{eqHoloChart} is a holomorphic chart in
a neighbourhood of~$z$.
Furthermore, if $W_2$ is close enough to $W_1$, then 
the line segment 
	\[	
		\{\Lambda_{W_2}(-i\theta_2)z \mid 0< \theta_2  <\pi \}
	\]
intersects $dS$ in some point, say $x_2 \in dS$ (just like the line
segment $\{\Lambda_{W_1}(-i\theta_1)z \mid 0< \theta_1  <\pi \}$, which
intersects $dS$ in $x_1 \in dS$).  Thus there is some $
\theta_2$ such that $z= \Lambda_{W_2}(i \theta_2) x_2$, as claimed in~i.). 
\end{proof}

Next we provide a  {\em flat tube theorem}  (see, e.g., \cite{BB, BEGS}; an early result of this type 
is due to Malgrange and Zerner) for the de Sitter space. 
   
\begin{theorem}
\label{flattube}
Let $f$ be a tempered distribution on $dS$ with the
following property: for any wedge $W \subset dS$ and any $x\in W $, the map 
	\begin{equation}
  		\label{eqStripHolo}
		t\mapsto f(\Lambda_{W}(t)x)
	\end{equation}
can be uniquely extended to a function defined  and  analytic in the strip ${\mathbb S}  
= \mathbb{R}+i ( 0,\pi ) $, whose boundary values are described by \eqref{eqStripHolo}.

Then $f$ is the boundary value, in the sense of distributions, of a unique function~$F$, 
which is analytic in the tuboid ${\mathcal T}_+$. 
\end{theorem}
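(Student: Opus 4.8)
\textbf{Proof strategy for Theorem \ref{flattube}.}

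The plan is to upgrade the separate one-parameter analyticity hypotheses into joint analyticity by using the two-dimensional holomorphic charts supplied by Lemma \ref{eqzWsx}, and then to invoke a flat-tube (edge-of-the-wedge / Malgrange--Zerner) argument along each such chart. First I would fix an arbitrary point $z \in {\mathcal T}_+$ and invoke Lemma \ref{eqzWsx} to produce two wedges $W_1, W_2$ close to each other, two angles $\theta_1, \theta_2 \in (0,\pi)$, and two points $x_1 \in W_1$, $x_2 \in W_2$ with $\Lambda_{W_1}(i\theta_1) x_1 = z = \Lambda_{W_2}(i\theta_2) x_2$, such that $(\tau_1,\tau_2) \mapsto \Lambda_{W_2}(\tau_2)\Lambda_{W_1}(\tau_1) z$ is a holomorphic chart near $z$. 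The key point is that the hypothesis \eqref{eqStripHolo} gives, for each fixed real $t_1$ near $0$, a function $t_2 \mapsto f(\Lambda_{W_2}(t_2)\Lambda_{W_1}(t_1) x_1)$ analytic in the strip ${\mathbb S}$, since $\Lambda_{W_1}(t_1) x_1 \in W_1$ is (for $W_2$ close enough to $W_1$) a point whose $W_2$-orbit still meets $dS$; symmetrically in the other variable. So along the real $2$-plane $\{(t_1,t_2)\}$ the distribution $f\circ(\text{chart})$ is separately analytic in each of the two complexified directions, with the correct semi-tube of analyticity (the product of two upper half-strips, which for small imaginary parts is a flat tube over a $2$-dimensional open cone).

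The main technical step is then the flat-tube theorem in these local coordinates: separate analyticity in each variable, plus temperedness (distributional boundary values), forces joint analyticity on the intersection of the two strips, i.e. on a genuine polydisc-type neighbourhood in $(\tau_1,\tau_2)$-space intersecting ${\mathcal T}_+$. This is exactly the flat version of the edge-of-the-wedge theorem; I would cite \cite{BB, BEGS} (and the classical Malgrange--Zerner result) rather than reprove it. Pulling this local holomorphic extension back through the chart $\Lambda_{W_2}(\tau_2)\Lambda_{W_1}(\tau_1) z$ yields a function $F_z$ holomorphic in a neighbourhood ${\mathcal U}_z$ of $z$ inside ${\mathcal T}_+$, whose boundary value on $dS \cap \overline{{\mathcal U}_z}$ is $f$. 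One must check that the boundary value computed via this chart agrees with the distribution $f$ on $dS$; this follows because along the curves $t \mapsto \Lambda_{W_i}(t) x_i$ the boundary values are prescribed to be \eqref{eqStripHolo}, and the set of such curves through points of $dS$ is rich enough to pin down a tempered distribution.

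Finally I would glue: the local pieces $F_z$ agree on overlaps by uniqueness of analytic continuation (any two of them are holomorphic on a connected open subset of the connected manifold ${\mathcal T}_+$ and share the same boundary values on an open piece of the totally real submanifold $dS$, hence coincide by the standard identity principle for functions holomorphic near a totally real edge). This produces a single function $F$ analytic on all of ${\mathcal T}_+$ with boundary value $f$; uniqueness of $F$ is immediate from the same edge-of-the-wedge uniqueness. The hard part will be the bookkeeping in the first paragraph: verifying that for $z \in {\mathcal T}_+$ and $W_2$ chosen close to $W_1$, the intermediate points $\Lambda_{W_1}(t_1) x_1$ (real $t_1$ small) really do lie in a wedge on which hypothesis \eqref{eqStripHolo} applies with the right strip, so that the two one-parameter analyticity statements can be assembled into the flat $2$-tube that the edge-of-the-wedge theorem consumes; this is where Lemma \ref{eqzWsx}, and in particular the coordinate-independent description of $\theta_1, x_1, W_1$ given in its proof, does the essential work, together with the $SO_0(1,2)$-invariance of ${\mathcal T}_+$ used to reduce to a neighbourhood of the origin.
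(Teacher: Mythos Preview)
Your proposal is correct and follows essentially the same route as the paper: fix $z\in{\mathcal T}_+$, use Lemma~\ref{eqzWsx} to obtain two wedge directions furnishing a holomorphic chart near $z$, apply the separate-analyticity hypothesis \eqref{eqStripHolo} in each direction, and invoke the flat-tube (Malgrange--Zerner) theorem to upgrade to joint holomorphy; then vary $z$ over ${\mathcal T}_+$. The paper's own proof is in fact terser than yours---it treats only the case of continuous $f$ explicitly (citing \cite[Vol.~I, Theorem~IX.14.2]{RS} for the flat-tube step) and defers the distributional generalization with the remark ``will be given elsewhere''---so your more careful handling of the gluing and your explicit flagging of the bookkeeping step are, if anything, improvements in exposition rather than a different argument.
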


\begin{proof}
In a first step, assume that $f$ is a continuous function. Let 
\[
	z \in {\mathcal T}_+,  \quad x_1 \in W_1, 
	\quad  x_2 \in  W_2 \quad 
	\text{and} \quad - \pi < \theta_1, \theta_2 < 0
\] 
as in Lemma  \ref{eqzWsx}. By hypotheses, 
$f$ can be analytically continued into the point $z$  in the variable
$t+i\theta$ (defined in Eq.~\eqref{eqHoloChart})  to $z$ along the path 
	\[
		\theta \mapsto \Lambda_{W_1}(i\theta)x_1, \qquad \theta_1 \leq \theta\leq 0 \; . 
	\]
$f$ can as well be analytically continued to $z$ 
in the variables $t_2 +i  \theta_2$, namely along the path 
$\theta' \mapsto \Lambda_{W_2}(i \theta') x_2$.
Both continuations coincide at~$z$, 
yielding a function~$F$, which is holomorphic separately in the variables $t+i\theta$ and 
$t'+i \theta'$ in a neighbourhood~$U$ of~$z$. By the flat tube 
theorem~\cite[Vol.~I, Theorem~IX.14.2]{RS}, $F$ is 
holomorphic on an open convex subset of $U$. Since $z$ was an arbitrary point 
in~${\mathcal T}_+$, it follows that $F$ is holomorphic in ${\mathcal T}_+$. This proves 
the statement in case $f$ is a continuous function. 

The necessary generalization to tempered distributions, together with an appropriate generalization 
of the Bros-Epstein-Glaser Lemma \cite[Vol.~II, Theorem~IX.15]{RS},
will be given elsewhere. 
\end{proof}

The following result clarifies the relation between the  tuboid 
${\mathcal T}_+$, as described in Lemma~\ref{t+2}, and the tangent bundle $TdS$.  
 
\begin{lemma}[Bros and Moschella \cite{BM}, p.~339]
\label{tds}
The map
	\begin{align} 
		\label{tm}
				\bigcup_{ x \in dS} 
				\left(  x, T_{ x } dS \right) & \to  dS_\mathbb{C}   \nonumber \\
				 ( x,  y ) & \mapsto  \sqrt{1- y \cdot y} \; \;  x + i  y    
	\end{align}
is a one-to-one $C^\infty$-mapping from $\cup_{ x \in dS} ( x, T_{ x} dS)$ onto 
$dS_\mathbb{C} \setminus  \{  z \in dS_\mathbb{C} \mid \Re  z = 0  \}$, and
if $ y \in V^+$, the map \eqref{tm} defines a diffeomorphism from 
	\[
		\bigcup_{ x \in dS} \left(  x,  {\mathcal P}^+_{ x} 
			\cap \{  y \in V^+\mid  y \cdot y < 1 \}  \right) 
	\]
onto  $ {\mathcal T}_+ \setminus \{  z \in dS_\mathbb{C} \mid \Re  z = 0  \} $.  
$ {\mathcal P}^+_{ x}$ is defined in \eqref{px}. 
\end{lemma}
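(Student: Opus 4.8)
The plan is to verify directly that the map
$$
\Psi \colon (x,y) \mapsto \sqrt{1-y\cdot y}\; x + i\, y
$$
lands in $dS_{\mathbb C}$ and then establish bijectivity and smoothness by constructing an explicit inverse. First I would check the range: given $x\in dS$ (so $x\cdot x=-r^2$, here working with $r=1$ in the normalization of \eqref{rmp3.20}) and $y\in T_x dS$ (so $x\cdot y=0$ and $y$ is space-like in the ambient sense but time-like relative to $dS$, i.e.\ $0\le y\cdot y$), set $u=\sqrt{1-y\cdot y}\,x$. Then $u\cdot u - y\cdot y = (1-y\cdot y)(x\cdot x) - y\cdot y = -(1-y\cdot y) - y\cdot y = -1$ and $u\cdot y = \sqrt{1-y\cdot y}\,(x\cdot y)=0$, so by \eqref{rmp3.20} indeed $\Psi(x,y)=u+iy \in dS_{\mathbb C}$. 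The condition $\Re z = u \ne 0$ corresponds precisely to $1-y\cdot y>0$, i.e.\ $y\cdot y<1$, which is exactly the restriction cutting out the locus $\{\Re z = 0\}$.

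Next I would invert the map. Given $z=u+iy\in dS_{\mathbb C}$ with $u\ne 0$, the ambient relations $u\cdot u - y\cdot y=-1$ and $u\cdot y=0$ force $u\cdot u = y\cdot y - 1 \in [-1,0)$, hence $y\cdot y\in[0,1)$. Recover $y$ as the imaginary part and set $x \doteq u/\sqrt{1-y\cdot y}$; then $x\cdot x = (u\cdot u)/(1-y\cdot y) = (y\cdot y-1)/(1-y\cdot y)=-1$, so $x\in dS$, and $x\cdot y = (u\cdot y)/\sqrt{1-y\cdot y}=0$, so $y\in T_x dS$ (identifying the tangent space at $x$ with $\{v : x\cdot v=0\}$). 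Finally $y\cdot y<1$ guarantees $y$ is time-like relative to $dS$ with $y\cdot y<1$. This shows $\Psi$ is a bijection onto $dS_{\mathbb C}\setminus\{\Re z=0\}$, and since both $\Psi$ and its inverse are given by algebraic expressions involving only $\sqrt{1-y\cdot y}$ (smooth and nonvanishing on the relevant domain), both are $C^\infty$.

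It remains to identify the preimage of the forward tuboid ${\mathcal T}_+$. By Proposition~\cite[Prop.~1]{BM2} recalled in \eqref{tube-0}, $z=u+iy\in{\mathcal T}_+$ iff $-\,\Im z\cdot p>0$ for all $p\in\partial V^+\setminus\{0\}$, i.e.\ iff $-y\cdot p>0$ for all such $p$; this says $-y$ lies in the open dual cone of $\partial V^+$, which (since $\partial V^+$ generates $\overline{V^+}$) is exactly the open forward cone, so $-y\in V^+$, equivalently $y\in -V^+$, i.e.\ $y$ is past-pointing time-like. Combined with $y\cdot y<1$ and $x\cdot y=0$, the vector $y$ (viewed in $T_x dS$) lies in ${\mathcal P}^+_x = T_x dS\cap(-V^+)$ with $y\cdot y<1$, matching \eqref{px}. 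Conversely any such $(x,y)$ produces $z\in{\mathcal T}_+$ by the same criterion. Thus the restriction of $\Psi$ to $\bigcup_{x\in dS}(x,\,{\mathcal P}^+_x\cap\{y\cdot y<1\})$ is a bijection onto ${\mathcal T}_+\setminus\{\Re z=0\}$; being a restriction of a diffeomorphism between open sets (the tangent-bundle domain is open, as is the target), it is itself a diffeomorphism.

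The main subtlety — not a deep obstacle but the point needing care — is the bookkeeping of sign conventions: the ambient product $v\cdot v$ is used with opposite signs for $x$ (space-like, $x\cdot x=-r^2$) and for $y$ qua tangent vector to $dS$ (time-like "in $dS$" means $y\cdot y\ge 0$ in the ambient metric, per the convention fixed in Section on causal structure), and the profile cone ${\mathcal P}^+_x$ uses $-V^+$ rather than $V^+$; one must consistently track these to see that $\Im z\in -V^+$ is the correct translation of the tuboid defining inequality and that the hemisphere normalization $r=1$ in \eqref{rmp3.20} versus general $r$ only rescales. No new analytic input beyond \eqref{tube-0}, \eqref{px}, and \eqref{rmp3.20} is required.
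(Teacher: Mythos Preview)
The paper does not prove this lemma --- it is stated with attribution to Bros and Moschella \cite{BM} and left unproved --- so there is no paper-side argument to compare against. Your approach (verify membership in $dS_{\mathbb C}$ via \eqref{rmp3.20}, construct an explicit inverse, then invoke \eqref{tube-0} for the tuboid part) is the natural one and is nearly complete, but two points need repair.

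First, it is not true that $y\in T_x dS$ forces $0\le y\cdot y$: the tangent space $T_x dS$, identified with $x^\perp$ in the ambient Minkowski space, inherits a Lorentzian form and contains space-like as well as time-like and null vectors. Only $x\cdot y=0$ holds. Fortunately your computations $u\cdot u-y\cdot y=-1$ and $u\cdot y=0$ do not use the erroneous claim, so the forward direction is unaffected.

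Second --- and this is the genuine gap --- in the inverse step you assert that the relations force $u\cdot u\in[-1,0)$. The upper bound $u\cdot u<0$ is what you need and is correct, but you give no reason for it; the lower bound $u\cdot u\ge-1$ is simply false (take $u=(0,0,2)$, $y=(0,\sqrt{3},0)$: then $u\cdot u=-4$, $u\cdot y=0$, $u\cdot u-y\cdot y=-1$). The missing argument is a short signature observation: if $u\ne0$ were time-like, $u^\perp$ would be negative-definite, so any $y\in u^\perp$ would satisfy $y\cdot y\le0$, contradicting $y\cdot y=u\cdot u+1>1$; if $u$ were null, $u^\perp$ is degenerate and again $y\cdot y\le0$ there, contradicting $y\cdot y=1$. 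Hence $u$ is space-like, $1-y\cdot y=-u\cdot u>0$, and your inverse $x=u/\sqrt{1-y\cdot y}$ is legitimate. With this supplied, the proof is complete; your identification of the tuboid preimage via \eqref{tube-0} and \eqref{px} is correct.
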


\section{Plane waves}
\label{planwaves}

The eigenfunctions of the Casimir operator on the light-cone $\partial V^+$
are homogeneous functions of degree $s=s^\pm$; see \eqref{dd1}.
Thus, in order to construct a {\em plane wave} on $dS \ni x$, 
one considers homogeneous functions of the scalar product 
	\begin{equation}
	\label{arbitrary-plane}
	x \cdot p = ( x + \lambda p + \mu q) \cdot p   \; , \qquad \lambda , \mu \in \mathbb{R} \; . 
	\end{equation}
In \eqref{arbitrary-plane} we have used \eqref{zero-plane}, with $q \in S^1$ such that $q \cdot p= 0$. 
Given a point $x \in \Gamma (W_1)$, the intersection of the plane\footnote{Of course, 
the planes \eqref{x-plane} for different $x \in \Gamma (W_1)$ are all parallel to each other.}
	\begin{equation}
	\label{x-plane}
	\left\{ x  +  \left[ \lambda 
	 \left( \begin{smallmatrix} 1 \\ 0 \\ -1 
	 \end{smallmatrix} \right)
	  + \mu q \right]  \mid  \lambda , \mu \in \mathbb{R}  \right\} \; ,  
	\end{equation}
with the de Sitter space $dS$ is just the horosphere
	\[
	 	P_{ \tau } = \left\{ y  \in dS  \mid  r {\rm e}^{\frac{\tau}{r}}  = y \cdot \left( \begin{smallmatrix} 1 \\ 0 \\ -1 
	 \end{smallmatrix} \right) \right\} \; ,  
	\]
where $\tau \in \mathbb{R}$ is fixed by requesting  
$r {\rm e}^{\frac{\tau}{r}}  = x \cdot \left( \begin{smallmatrix} 1 \\ 0 \\ -1  \end{smallmatrix} \right)$. 
The angle between the plane \eqref{x-plane}
in $\mathbb{R}^{1+2}$ containing the horosphere $P_{ \tau } $
and the $x_0$-axis is always $\pi/4$. 

For the two\footnote{We will soon integrate over $p\in \Gamma_0$. As $p$ rotates on the light cone $\partial V^+$, 
{\em all} light rays in $dS$ are affected.}
light rays forming the 
horosphere $P_{- \infty}$, \emph{i.e.}, the 
intersection of~$dS$ with the plane \eqref{zero-plane},
the scalar $x \cdot p$ vanishes and powers with negative real part have to be defined in distributional sense.
One possibility, which we will pursue, 
is to define them as the boundary values of analytic functions, using the \emph{principal value}\index{principle values} 
of the complex powers. The characterisation of the tuboid given in \eqref{tube-0} guarantees that the 
functions~
	\[   z \mapsto  ( z \cdot  p)^{s} \]
are holomorphic both in ${\mathcal T}_+$ and ${\mathcal T}_-$. Their boundary values  
as~$ z \in dS_\mathbb{C}$ tend to $x \in dS$ 
from within the respective tuboids ${\mathcal T}_+$ and ${\mathcal T}_-$ of $dS$
are denoted as 
\label{planewavepage}
	\begin{equation}
		\label{eqPW}
		 x \mapsto   (  x_\pm \cdot  p )^{s} \; , \qquad x \in dS \; .
	\end{equation}
As expected, we encounter a discontinuity as $  \Im 
x_+ \cdot p \nearrow  0$ or $  \Im   x_- \cdot p  \searrow 0$, respectively.
Another way of denoting the function \eqref{eqPW} is \cite[Eq.~(45)]{BM2}
	\begin{equation}
		\label{eqPW-b}
		(x_\pm \cdot p)^{s^+} = \mathbb{1}_{[0, \infty)} ( x \cdot p ) \;  | x \cdot p |^{s^+} 
			+ {\rm e}^{\mp i \pi s^+} \mathbb{1}_{(0, \infty)} ( - x \cdot p )\;  | x \cdot p |^{s^+} \; , 
	\end{equation}
where $\mathbb{1}_{[0, \infty)}$ is the \emph{Heaviside step function}\index{Heaviside step function}, \emph{i.e.}, 
$\mathbb{1}_{[0, \infty)} (t) = 0$ if $t<0$ and $\mathbb{1}_{[0, \infty)} (t) = 1$ if $t \ge 0$. In case $\Re s^+> -1$, the singularity is 
integrable and the equality \eqref{eqPW-b}
holds in the sense of $L^1$-functions.  

An explicit computation\footnote{The Lapalce-Beltrami operator $\square_{dS} =|g|^{-1/2}\partial_\mu g^{\mu\nu}|g|^{1/2}\partial_\nu$ on $dS$,
can be expressed as a trace of $\square_{\mathbb{R}^3}$.} \cite[Eq.~(4.3)]{BM} shows that the
plane waves given in \eqref{eqPW} satisfy the Klein--Gordon equation  
	\begin{align}
	\label{eqPW-new}
		(\square_{dS} +\mu^2)\left(  x_\pm  \cdot   p \right)^s  & =
		r^{-2} \left(  K_0^2 - L_1^2 - L_2^2 + \tfrac{1}{4} + m^2 r^2\right) \left(  x_\pm  \cdot   p \right)^{-\frac{1}{2} \mp  i mr}  \nonumber \\
		& = 0 \; , \qquad -s(s+1) = \mu^2 r^2 = \frac{1}{4} + m^2 r^2\; , 
	\end{align}
on the de Sitter space $dS$. As we have seen in Equ.~\eqref{qqdq}--\eqref{dd1}
they also satisfy the Klein--Gordon equation on the forward light cone~$\partial V^+$ (see also \eqref{casimir2}):
	\[
		(\square_{\partial V^+} +\zeta^2) \left(  x_\pm  \cdot  p \right)^s = 0 \; , \qquad -s(s+1) = \zeta^2 \; . 
	\]
Note that in contrast to the Minkowsi space case, the operators $K_0, L_1$ and $L_2$ do not commute, so they can not be represented 
as commuting multiplication operators in Fourier space.

\section{The Fourier-Helgason transformation}
\label{Fourier-HelgasonTransformation}

\begin{definition}
Let $p \in \partial V^+ $ and  $s \in \{ z \in \mathbb{C}  \mid -z (z+1)> 0 \}$.
The {\em Fourier-Helgason transforms}~$ {\mathcal F}_\pm $  are 
defined~\cite[Eq.~(44), see also Definition 2]{BM2} by
\label{Fouriertransformpage}
	\begin{equation}
	\label{ftps}
 	 {\mathcal D} (dS)  \ni f \mapsto	\widetilde {f}_\pm (  p, s) 
		= \int_{dS} {\rm d} \mu_{dS} ( x ) \; f( x ) \; (  x_\pm \cdot  p )^s \; . 
	\end{equation}
\end{definition}

For $p$ fixed, the functions  $\widetilde {f}_\pm ( p , \, . \,)$
are holomorphic with respect\footnote{Note that a function  analytic in the  strip $- 1 < \Re s < 0$ 
is uniquely determined by its values  
on one of the two symmetry axis given in \eqref{s1} and \eqref{s2}.} to $s$ in the strip $- 1 < \Re s < 0$ 
[Bros und Moschella~\cite{BM2}, Prop.~8.a]. 

\begin{lemma} 
\label{nu-analyicity}
The function
\[
\nu \mapsto \widetilde {f}_\pm (  p, -\tfrac{1}{2} - i \nu ) 
\]
is analytic in the open strip $\{ \nu \in \mathbb{C} \mid | \Im \nu | < \tfrac{1}{2} \}$.
\end{lemma}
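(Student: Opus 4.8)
The statement to establish is that for fixed $p \in \partial V^+$, the map $\nu \mapsto \widetilde{f}_\pm(p, -\tfrac12 - i\nu)$ is analytic in the open strip $\{\nu \in \mathbb{C} \mid |\Im \nu| < \tfrac12\}$. This is a change of variables applied to the analyticity result of Bros and Moschella quoted just above the lemma, namely that $s \mapsto \widetilde{f}_\pm(p, s)$ is holomorphic in the strip $-1 < \Re s < 0$. The plan is therefore: first, set $s = s(\nu) = -\tfrac12 - i\nu$ and observe that this affine map carries the $\nu$-strip $|\Im \nu| < \tfrac12$ onto the $s$-strip $-1 < \Re s < 0$; second, since a composition of a holomorphic function with an affine (hence entire) map is holomorphic, conclude analyticity in $\nu$ directly. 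I would also give the self-contained argument for the holomorphy in $s$ in case the reader wants it spelled out, since the only input needed beyond the cited proposition is a routine differentiation-under-the-integral estimate.

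\textbf{Key steps.} First I would record the image of the strip under $s(\nu)$: writing $\nu = a + ib$ with $a, b \in \mathbb{R}$, one has $s(\nu) = -\tfrac12 - i(a+ib) = (-\tfrac12 + b) - ia$, so $\Re s(\nu) = -\tfrac12 + b$, and thus $-1 < \Re s(\nu) < 0 \iff -\tfrac12 < b < \tfrac12 \iff |\Im \nu| < \tfrac12$. Second, I would invoke Proposition 8.a of Bros and Moschella (as quoted in the excerpt) to say $\widetilde{f}_\pm(p, \cdot)$ is holomorphic on $\{-1 < \Re s < 0\}$; composition with the entire map $\nu \mapsto s(\nu)$ yields holomorphy of $\nu \mapsto \widetilde{f}_\pm(p, s(\nu))$ on the preimage strip. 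Third, for completeness, I would sketch the direct verification: for $x \in dS$ and $-1 < \Re s < 0$, the function $s \mapsto (x_\pm \cdot p)^s$ is entire in $s$ (it is a branch of a complex power, holomorphic away from the cut, and the boundary-value prescription of \eqref{eqPW-b} keeps it locally bounded), with $\partial_s (x_\pm \cdot p)^s = (x_\pm \cdot p)^s \log(x_\pm \cdot p)$; since $f \in {\mathcal D}(dS)$ has compact support $\mathcal{K}$, and on $\mathcal{K}$ the quantities $|(x_\pm \cdot p)^s|$ and $|(x_\pm \cdot p)^s \log(x_\pm \cdot p)|$ are dominated, uniformly for $s$ in compact subsets of the strip, by $|x \cdot p|^{\Re s}$ times a logarithmic factor (which is $L^1$ on $\mathcal{K}$ against ${\rm d}\mu_{dS}$ precisely because $\Re s > -1$, cf.\ the remark after \eqref{eqPW-b} that the singularity is integrable for $\Re s^+ > -1$), Morera's theorem plus Fubini, or equivalently differentiation under the integral sign, give holomorphy of $s \mapsto \widetilde{f}_\pm(p, s)$.

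\textbf{Main obstacle.} There is essentially no obstacle: the content is entirely the coordinate change, and the only substantive analytic fact --- holomorphy of $\widetilde{f}_\pm(p, \cdot)$ on $-1 < \Re s < 0$ --- is cited from \cite{BM2} and may be used freely. The one point requiring a line of care is the integrability of $|x \cdot p|^{\Re s}$ (and of $|x \cdot p|^{\Re s}\,|\log|x\cdot p||$) over the compact set $\operatorname{supp} f \subset dS$ with respect to ${\rm d}\mu_{dS}$ when $-1 < \Re s < 0$: the locus $\{x \cdot p = 0\}$ in $dS$ is the pair of light rays forming the horosphere $P_{-\infty}$, a submanifold of codimension one, near which $x \cdot p$ vanishes to first order; hence $|x \cdot p|^{\Re s}$ behaves like $|t|^{\Re s}$ in a transverse coordinate $t$, which is integrable exactly because $\Re s > -1$, and the extra logarithm does not spoil this. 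This is why the open strip $|\Im\nu| < \tfrac12$ (equivalently $-1 < \Re s < 0$) is the natural domain, and it is the only place where the endpoint restrictions enter.
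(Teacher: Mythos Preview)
Your proposal is correct and follows exactly the paper's approach: the lemma is stated immediately after the citation of Bros--Moschella's Proposition~8.a (holomorphy of $s\mapsto\widetilde f_\pm(p,s)$ on $-1<\Re s<0$), and the paper offers no separate proof, treating it as the obvious reformulation under the affine change of variable $s=-\tfrac12-i\nu$. Your additional self-contained sketch of the Bros--Moschella analyticity via the $L^1$-integrability of $|x\cdot p|^{\Re s}$ on $\operatorname{supp} f$ for $\Re s>-1$ is a welcome elaboration, but not a departure from the paper's reasoning.
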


For $s$ fixed, the two functions $\widetilde {f}_\pm ( \, . \, , s)$ are
continuous, homogeneous functions of 
degree $s$ on~$\partial V^+$. 
Together with \eqref{casimir2} this implies that 
$\widetilde {f}_\pm (\, . \, , s)$ is an eigenfunction 
of the Casimir operator $M^2$ on $\partial V^+$ iff $s$ lies on    
\begin{itemize}
\item[(i)]  the symmetry axis 
	\begin{equation}
	\label{s1} 
		s = -1/2 \mp i \nu \; , \qquad \nu = \sqrt{\mu^2 r^2 - \tfrac{1}{4}} = m r\in \mathbb{R}^+_0 \; , 
	\end{equation}
of the strip $- 1 < \Re s < 0$. This choice corresponds to  $\mu^2  = \tfrac{1}{4r^2}+ m^2 \ge \tfrac{1}{4r^2}$, 
\emph{i.e.}, to a \emph{bare mass} $m  \ge  0$.;
\item[(ii)] the symmetry axis 
	\begin{equation}
		\label{s2} 
		s = -1/2 \mp i  \nu \; , \qquad  \nu = i \sqrt{\tfrac{1}{4} - \mu^2 r^2} = i mr  \; , 
	\end{equation}
of the strip $- 1 < \Re s < 0$. This choice corresponds to  $0 < \mu^2  \le \tfrac{1}{4r^2} $, \emph{i.e.}, to 
a \emph{negative bare mass}\index{bare mass} $-\tfrac{1}{2r} < m \le 0$.
\end{itemize}
Thus  the critical mass~$\mu_c$, 
which separates the two cases, is $\mu_c=\sqrt{-s (s+1)}= (2r)^{-1}$. 
Note that the factor $(2r)^{-1}$ may be interpreted as a contribution to the mass coming from the 
curvature of space-time (see, \emph{e.g.}, \cite{GaT}).

\begin{remark}
\label{splitting}
Taking advantage of \eqref{eqPW-b},  the Fourier-Helgason transforms~$ {\mathcal F}_\pm$ can  be written in the following form
(see \cite[Eq.~(50)]{BM2})
	\begin{align*}
 	\widetilde {f}_\pm (  p, -1/2 - i \nu )
		& = \int_{ \{ x \in dS \mid x \cdot p >0 \}} {\rm d} \mu_{dS} ( x ) \; f( x ) \; |  x \cdot  p |^{-1/2 - i \nu } \\
		& \qquad    + {\rm e}^{\mp i \pi s^+}  
		\int_{ \{ x \in dS \mid x \cdot (-p) >0 \}} {\rm d} \mu_{dS} ( x ) \; f( x ) 
		\; |  x \cdot  p |^{-1/2 - i \nu } \; . 
	\end{align*}
This identity is valid in the open strip $\{ \nu \in \mathbb{C} \mid | \Im \nu | < 1/2 \}$. The second term can be viewed as 
a continuous, homogeneous function of degree $s^+$ on~$\partial V^-$. 
\end{remark}

\section{The Plancherel theorem on the hyperboloid}

\label{hardyspacepage}
Denote by $H^2 ({\mathcal T}_+)$, 
$H^2 ({\mathcal T}_-)$, $H^2 ({\mathcal T}_\leftarrow)$ and 
$H^2 ({\mathcal T}_\rightarrow)$ the Hardy spaces of functions~$F(z)$ 
characterised by the following properties~\cite[Sect.~3.2]{BM2}\cite[Sect.~3.3]{Ne}: 
\begin{itemize}
\item [$i.)$] $F$ is holomorphic in the tuboid considered; 
\item[$ii.)$] the function $F(z)$ admits  boundary values $f(x)$ on $dS$ from this tuboid, 
which belong to $L^2 (dS, {\rm d} \mu_{dS})$;  
\item[$iii.)$] $F$ is `sufficiently regular at infinity in its domain' (in the sense made precise 
in \cite[p.~10]{BM2}).
\end{itemize}

\begin{theorem}[Bros \& Moschella~\cite{BM2}, Theorem 1]
\label{hardy}
Any given function $f \in L^2 (dS, {\rm d} \mu_{dS})$ admits a decomposition of the form
\label{hardyspacedecomposition}
	\begin{equation}
		\label{decompose}
 		f = f_+ + f_- + f_\leftarrow + f_\rightarrow \equiv \sum_{tub} f_{(tub)} \; , 
		\qquad (tub) = +, -, \leftarrow, \rightarrow  \; ,  
	\end{equation}
where $f_{(tub)} ( x ) \in L^2 (dS, {\rm d} \mu_{dS})$ is the boundary value of the function 
	\begin{equation}
		\label{ck}
 		F_{(tub)} ( z) 
		= \epsilon_{(tub)} \frac{1}{\pi^2} \int_{dS} {\rm d} 
		\mu_{dS} ( x) \; \frac{f( x)}{{( x -  z ) \cdot ( x -  z )} } \in H^2 ({\mathcal T}_{(tub)}) \; .
	\end{equation}
The sign function $\epsilon_{(tub)}$ takes the value $-1$ for  ${\mathcal T}_\pm$,  and
$+1$ for ${\mathcal T}_\leftarrow$ and~${\mathcal T}_\rightarrow$.
\end{theorem}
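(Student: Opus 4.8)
The statement is Theorem 1 of Bros–Moschella (the Plancherel/Hardy-space decomposition on the hyperboloid), so the plan is to reconstruct their argument. The starting point is the Cauchy-type kernel $(x-z)\cdot(x-z)$ appearing in \eqref{ck}. For fixed $x \in dS$, the function $z \mapsto \bigl((x-z)\cdot(x-z)\bigr)^{-1}$ has singularities exactly where $x-z$ is null; since $x \cdot x = z \cdot z = -r^2$ on $dS$ and $dS_{\mathbb C}$, one computes $(x-z)\cdot(x-z) = -2r^2 - 2\, x\cdot z$, so the singular set is $\{ x \cdot z = -r^2\}$. The first step is therefore to show that for $z$ in any of the four tuboids ${\mathcal T}_{(tub)}$, and $x$ ranging over $dS$, the denominator never vanishes: this follows from the characterisation \eqref{tube-0} of ${\mathcal T}_\pm$ (the imaginary part of $z\cdot p$ has a fixed sign for all $p\in\partial V^+$), together with the analogous characterisation of the chiral tuboids. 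Hence each $F_{(tub)}(z)$ in \eqref{ck} is well-defined and holomorphic in its tuboid, establishing property $i.)$ of the Hardy space.

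\textbf{Key steps.} Next I would establish that the four functions reproduce $f$ on the boundary, i.e.\ $f = \sum_{tub} f_{(tub)}$ in $L^2(dS,{\rm d}\mu_{dS})$. The natural route is harmonic analysis on $dS$: expand $f$ in (generalised) eigenfunctions of the Laplace–Beltrami operator $\square_{dS}$, equivalently via the Fourier–Helgason transform ${\mathcal F}_\pm$ from Section~\ref{Fourier-HelgasonTransformation}. One computes the action of the integral operator with kernel $\bigl((x-z)\cdot(x-z)\bigr)^{-1}$ on each plane wave $(x_\pm\cdot p)^s$; because this kernel is $SO_0(1,2)$-invariant, it acts as a scalar (a meromorphic function of $s$, expressible via Gamma functions) on each isotypic component, and the four tuboid contributions correspond to the four ``branches'' obtained from the two boundary values $\pm$ and the two signs of $x\cdot p$ on $\partial V^\pm$. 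Summing these scalars and checking they add up to $1$ on the relevant spectral set (the principal series $s = -\tfrac12 \mp i\nu$, $\nu \in \mathbb R$) gives the completeness relation \eqref{decompose}; the normalisation constant $\pi^2$ and the signs $\epsilon_{(tub)}$ in \eqref{ck} are fixed by this computation. Property $ii.)$ — that the boundary values $f_{(tub)}$ lie in $L^2(dS)$ — then follows from the Plancherel theorem for ${\mathcal F}_\pm$ together with Parseval applied to each branch, and property $iii.)$ (regularity at infinity) is built into the $1/((x-z)\cdot(x-z))$ form of the kernel, which decays appropriately as $z$ runs to infinity inside a tuboid.

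\textbf{Main obstacle.} The genuinely hard part is controlling the boundary-value limits: showing that $F_{(tub)}(z)$, as $z \to x \in dS$ from inside ${\mathcal T}_{(tub)}$, converges in $L^2(dS,{\rm d}\mu_{dS})$ to a well-defined $f_{(tub)}$, and that the four limits reassemble $f$. This requires the flat tube theorem (Theorem~\ref{flattube}) and the geometric description of the tuboids in Lemmas~\ref{t+2}, \ref{eqzWsx} and \ref{tds} to know that approach from within a tuboid is unobstructed and that the Cauchy kernel has the right distributional boundary behaviour; the $\pm i0$ prescriptions in \eqref{eqPW} and \eqref{eqPW-b} are exactly what make the branch bookkeeping consistent. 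The chiral tuboids ${\mathcal T}_\leftarrow, {\mathcal T}_\rightarrow$, being non-simply-connected, need slightly more care, but for the physics applications in this paper they drop out (the relevant states live over ${\mathcal T}_+$), so one could also present a reduced version retaining only $f_+ + f_-$ for functions with spectrum in the principal series. In any case, since this is Theorem~1 of \cite{BM2}, the cleanest exposition is to cite that reference for the analytic estimates and here only indicate the invariance/spectral computation that pins down the constants.
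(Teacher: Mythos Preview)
The paper does not give a proof of this theorem at all: it is stated as a quotation of Theorem~1 from Bros--Moschella~\cite{BM2}, followed only by a remark explaining the Minkowski-space analogue (splitting the support of the ordinary Fourier transform into four cones). Your final sentence --- that the cleanest exposition is simply to cite~\cite{BM2} --- is exactly what the paper does, so in that sense your proposal and the paper agree.

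Your sketch goes considerably beyond the paper by outlining how the Bros--Moschella argument actually runs. The strategy you describe (invariance of the Cauchy kernel, spectral decomposition via the Fourier--Helgason transform, identification of the four branches with the four tuboids, and fixing the constants and signs by a Plancherel computation) is a reasonable summary of the structure of~\cite{BM2}; the paper itself later invokes exactly the ingredients you mention --- the expansion of the Cauchy kernel in Eq.~\eqref{ac}, the inversion formula~\eqref{inverse}, and Molchanov's Plancherel theorem (Theorem~\ref{plancherel}) --- but as consequences drawn from~\cite{BM2} rather than as steps toward proving Theorem~\ref{hardy}. One small correction: you write that the singular set is $\{x\cdot z = -r^2\}$, but from $(x-z)\cdot(x-z) = -2r^2 - 2\,x\cdot z$ the zero set is $\{x\cdot z = -r^2\}$ only in the real case; for $z\in dS_{\mathbb C}$ one has $z\cdot z = -r^2$ as well, so the computation is correct, though the non-vanishing of the denominator for $z$ in a tuboid requires more than just the sign of $\Im(z\cdot p)$ for lightlike $p$ --- one needs the full characterisation of the tuboids in~\cite{BM2}.
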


\begin{remark}In Minkowski space $\mathbb{R}^{1+1}$, a similar decomposition can 
be gained by simply dividing the support of the Fourier transform $\widetilde {f}$ 
into the four cones $\{ (E, p ) \in \mathbb{R}^{1+1} \mid \pm E > |p|\}$ and 
$\{ (E, p ) \in \mathbb{R}^{1+1} \mid \pm p > |E|\}$. Note that for $m>0$ the boundary sets 
$\{ (E, p ) \in \mathbb{R}^{1+1} \mid \pm p = E \}$ are of measure zero. The inverse Fourier 
transform of each of these functions is then the boundary  of a function analytic 
in a tube. For the first two, the tube is $T^\pm = \mathbb{R}^2 \pm i \{ (x_0, x_1 ) 
\in \mathbb{R}^{1+1} \mid \pm x_0 > |x_1|\}$. 
The situation is similar for the two other cases. 
\end{remark}

The Cauchy kernel on $dS_\mathbb{C}$ introduced in  \eqref{ck} is~\cite[Proposition~11]{BM2}
	\begin{equation}
		\label{ac}
		\frac{1}{( z' -  z ) \cdot ( z' -  z )} 
		= - \frac{\pi^2}{2} \int_0^\infty {\rm d} \mu_\pm (\nu)
			\int_\Gamma {\rm d} \mu_\Gamma (p )\;  
			( z \cdot p)^{- \frac{1}{2} + i \nu}   ( p \cdot  z')^{- \frac{1}{2} - i \nu} \; .  
	\end{equation}
The integral in \eqref{ac}  is absolutely convergent for 
$( z,  z') \in {\mathcal T}_+ \times {\mathcal T}_-$ for~${\rm d} \mu_+$ and for 
$( z,  z') \in {\mathcal T}_- \times {\mathcal T}_+ $ for~${\rm d} \mu_-$, respectively. 
The measure ${\rm d} \mu_\pm (\nu)$ on $\mathbb{R}^+$ is 
 {\rm (see~\cite[Sect.~4.1]{BM2})}
	\[  
		\label{dmu}
		{\rm d} \mu_\pm (\nu) 
		= \frac{1}{2\pi^2} \; \frac{\nu \tanh \pi \nu}{ {\rm e}^{\pm \pi \nu} \cosh \pi \nu }  {\rm d} \nu  \; . 
	\]
Combine \eqref{ck}, \eqref{ac} and \eqref{ftps} to find the {\em inversion formula} \cite[Eq.~(80)]{BM2}
	\begin{equation}
	\label{inverse}
		F_\pm (  z) = - \int_0^\infty {\rm d} \mu_\pm (\nu) 
		\int_\Gamma {\rm d} \mu_\Gamma ( p )\;  
			( z \cdot  p)^{- \frac{1}{2} + i \nu} \;  
			\widetilde {f}_\pm \left(  p, - {\textstyle \frac{1}{2} } - i \nu \right) \; .
	\end{equation}
The functions $f_\pm (  x) $ introduced in \eqref{decompose} now appear as  boundary values of the 
holomorphic functions $F_\pm (  z)$, $z \in {\mathcal T}_\pm$. 

\begin{remark}
For every function $F_\pm$ in the Hardy space~$H^2 ({\mathcal T}_\pm)$ the  transform 
$\widetilde {f}_\pm \left(  p, - {\textstyle \frac{1}{2} } - i \nu \right)$ 
vanishes \cite[Proposition~8]{BM2}. This follows from the analyticity properties stated in 
Proposition~\ref{prop:4.1}. A similar result holds true in the Minkowski space-time:
The functions $f$ on $\mathbb{R}^{1+d}$, which 
are boundary values of holomorphic functions in either tube ${\mathfrak T}_\pm = \mathbb{R}^{1+d} \mp i V^+ $
are exactly the functions characterised by the fact that their Fourier transforms 
	\[
	\widetilde f (k) = \frac{1}{(2\pi)^{\frac{1+d}{2}}}\int_{\mathbb{R}^{1+d}} {\rm d} y \; f(y) \; {\rm e}^{i   k \cdot y} \; , 
	\qquad f \in {\mathcal D} (\mathbb{R}^{1+d}) \; , 
	\]
have their support contained in the closure of either one of the 
cones~$ V^\pm$; see, \emph{e.g.},~\cite[Ch.~8]{Schwartz}. 
\end{remark}

\begin{theorem}[Molchanov \cite{Mo1}]
\label{plancherel}
For any pair of functions $f, g$ in $L^2(dS, {\rm d} \mu_{dS})$ and their corresponding 
decomposition given in \eqref{decompose}, one has the {\em Plancherel theorem}\footnote{These 
are the Eq.~(118) and Eq.~(119) in \cite{BM2}.}:
	\begin{align}
		\label{planch}
 		& \int_{dS} {\rm d} \mu_{dS} ( x)\;  \overline{f_\pm( x )} g_\pm (  x )  \\
		& \qquad  = 
		\int_0^\infty {\rm d} \mu_\pm (\nu)
		\int_\Gamma {\rm d} \mu_\Gamma ( p ) \; \overline{\widetilde {f}_\pm 
		\left(  p, - {\textstyle \frac{1}{2} } - i \nu \right) } 
		\; \widetilde {g}_\pm \left(  p, - {\textstyle \frac{1}{2} } - i \nu \right) \; . \nonumber
	\end{align}
The measures ${\rm d} \mu_{dS}$ and ${\rm d} \mu_\Gamma$ denote the Lorentz 
invariant measures on $dS$ and the restriction of the Lorentz invariant measures 
on $\partial V^+$ to $\Gamma$.  
\end{theorem}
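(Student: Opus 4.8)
The plan is to deduce the Plancherel identity from the two ingredients already assembled in the excerpt: the Cauchy-kernel representation~\eqref{ac} of $\bigl((z'-z)\cdot(z'-z)\bigr)^{-1}$ in terms of the plane waves $(z\cdot p)^{-\frac12\pm i\nu}$, and the Hardy-space decomposition $f=f_++f_-+f_\leftarrow+f_\rightarrow$ of Theorem~\ref{hardy}, which identifies $f_\pm$ as the boundary value of $F_\pm\in H^2(\mathcal T_\pm)$ given by the Cauchy integral~\eqref{ck}. First I would treat the case $g=f$ and extract the general case by polarisation at the end. Starting from $\int_{dS}{\rm d}\mu_{dS}(x)\,\overline{f_\pm(x)}\,f_\pm(x)$, I would use that $F_\pm$ is the holomorphic extension of $f_\pm$ into $\mathcal T_\pm$, that the boundary value of the conjugate $\overline{F_\pm}$ lives in the opposite tuboid $\mathcal T_\mp$, and that the $L^2$-pairing of boundary values can be written as a double integral against the Cauchy kernel: schematically
\[
  \langle f_\pm, f_\pm\rangle_{L^2(dS)}
  = \epsilon_{(\pm)}\,\frac{1}{\pi^2}\int_{dS}\!\!\int_{dS}
    \overline{f_\pm(x)}\,\frac{f_\pm(x')}{(x-x')\cdot(x-x')}\,
    {\rm d}\mu_{dS}(x)\,{\rm d}\mu_{dS}(x') \; ,
\]
the point being that the Cauchy kernel reproduces $F_\pm$ on $H^2(\mathcal T_\pm)$ and annihilates the other three Hardy-space components (this last orthogonality is exactly the content of the Remark following~\eqref{inverse}, i.e.\ $\widetilde f_\pm$ vanishes on the "wrong" Hardy spaces).

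Next I would substitute the spectral representation~\eqref{ac} of the Cauchy kernel into this double integral. After inserting $(x\cdot p)^{-\frac12+i\nu}(p\cdot x')^{-\frac12-i\nu}$ and using Fubini — which is justified by the absolute convergence statement accompanying~\eqref{ac} together with $f\in{\mathcal D}(dS)$ or an approximation argument for general $f\in L^2$ — the $x$-integral produces $\overline{\widetilde f_\pm(p,-\frac12-i\nu)}$ (using the reality of the integration and that $\overline{(x_+\cdot p)^{-\frac12-i\nu}}=(x_-\cdot p)^{-\frac12+i\nu}$ on the boundary), and the $x'$-integral produces $\widetilde f_\pm(p,-\frac12-i\nu)$, by the very Definition~\eqref{ftps} of the Fourier--Helgason transform. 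What remains is precisely
\[
  \int_0^\infty{\rm d}\mu_\pm(\nu)\int_\Gamma{\rm d}\mu_\Gamma(p)\;
  \bigl|\widetilde f_\pm(p,-{\textstyle\frac12}-i\nu)\bigr|^2 \; ,
\]
where the sign $\epsilon_{(\pm)}=-1$ is cancelled by the $-\frac{\pi^2}{2}$ prefactor in~\eqref{ac} against the $\frac1{\pi^2}$ in~\eqref{ck}, and the contour independence of $\int_\Gamma{\rm d}\mu_\Gamma$ for homogeneous-degree-$(-1)$ integrands (established in the excerpt, Subsection ``Measures on contours on the light-cone'') guarantees the right-hand side is well defined. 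Finally polarisation in $f,g$ upgrades this to~\eqref{planch} for the $\pm$ components; the same computation with $\epsilon_{(tub)}=+1$ handles the chiral pieces if one wanted them, but for the statement as given only the $\pm$ identity is claimed.

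The main obstacle I expect is justifying the interchange of the $\nu$- and $p$-integrations with the two $x$-integrations, since the plane waves $(x_\pm\cdot p)^{s}$ are only $L^1$ near the zero set $x\cdot p=0$ when $\Re s>-1$ and are genuinely distributional at the symmetry axis $\Re s=-\frac12$; the clean way around this is to work first with $f\in{\mathcal D}(dS)$, use the analyticity of $\nu\mapsto\widetilde f_\pm(p,-\frac12-i\nu)$ in the strip $|\Im\nu|<\frac12$ (Lemma~\ref{nu-analyicity}) to shift the $\nu$-contour slightly off the axis where everything is absolutely convergent by~\eqref{ac}, carry out Fubini there, and then shift back, and finally extend to general $f\in L^2(dS,{\rm d}\mu_{dS})$ by density together with the $L^2$-boundedness of $f\mapsto f_\pm$ from Theorem~\ref{hardy}. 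A secondary point requiring a little care is checking that $\overline{f_\pm}$ really is the boundary value from $\mathcal T_\mp$ and that the Cauchy kernel indeed annihilates $f_\mp,f_\leftarrow,f_\rightarrow$ when paired against $\overline{f_\pm}$ — but this is a direct consequence of the holomorphy properties of the Hardy spaces and the orthogonality already recorded in the Remark after~\eqref{inverse}, so it does not need to be reproved from scratch.
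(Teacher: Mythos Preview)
The paper does not actually supply a proof of this theorem: it is stated with attribution to Molchanov \cite{Mo1} and, in the footnote, to Eqs.~(118)--(119) of Bros--Moschella \cite{BM2}, and is followed immediately by the next Part of the text. So there is no ``paper's own proof'' to compare against.

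Your strategy---insert the spectral representation~\eqref{ac} of the Cauchy kernel into the reproducing formula~\eqref{ck} for $f_\pm$, then recognise the resulting $x$- and $x'$-integrals as $\overline{\widetilde f_\pm}$ and $\widetilde g_\pm$---is precisely the route taken in \cite{BM2}, so in that sense you have reconstructed the argument the paper is implicitly quoting. One point to tighten: the identification of the $x$-integral with $\overline{\widetilde f_\pm(p,-\tfrac12-i\nu)}$ requires both the conjugation rule $\overline{(x_\pm\cdot p)^{-\frac12-i\nu}}=(x_\mp\cdot p)^{-\frac12+i\nu}$ \emph{and} the fact that $\widetilde{(f_\pm)}_\pm=\widetilde f_\pm$ while the other Hardy components contribute nothing; you invoke the Remark after~\eqref{inverse} for the latter, but note that as written the Remark says the transform of $F_\pm\in H^2(\mathcal T_\pm)$ vanishes, which is the \emph{opposite} index convention to what you need---the intended statement (cf.\ \cite[Prop.~8]{BM2}) is that the $\mathcal F_\mp$-transform of a boundary value from $\mathcal T_\pm$ vanishes, so you should cite \cite{BM2} directly rather than the paper's Remark. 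Your handling of the Fubini justification via a slight $\nu$-contour shift and density is the standard way to make the distributional boundary values rigorous and is fine.
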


\part{Classical Fields}

\chapter{Classical Field Theory}
\setcounter{equation}{0}

In this chapter we consider the classical {\em Lagrangian density}
	\[
		{\mathcal L}(\mathbb{\Phi}) = \frac{1}{2}  \bigl( \nabla \mathbb{\Phi} \cdot \nabla \mathbb{\Phi} 
		- \mu^2 \mathbb{\Phi}^2  - P( \mathbb{\Phi}) \bigr)\; .  
	\]
Here $\mathbb{\Phi}$ is a real valued scalar field, $\nabla$ denotes the Levi-Civita connection on $dS$.
The  polynomial $P$ is bounded from below and $\mu >0$. 

\section{The classical equations of motion}

Let ${\rm K}$ be a compact submanifold of $dS$. The action associated to ${\mathcal L}$ and $K$ is 
	\begin{align*}
	S({\rm K}, \mathbb{\Phi}) & = \int_{\rm K} {\rm d} \mu_{dS} (x)\;  {\mathcal L}(\mathbb{\Phi}(x)) \\
	& = \frac{1}{2}
	\int_{\rm K}  {\rm d}^2 x  \; \sqrt{|g|}  \; \Bigl(  g^{\mu \nu}\partial_ \mu \mathbb{\Phi}(x) \partial_ \nu \mathbb{\Phi}(x)
	- \mu^2 \mathbb{\Phi}^2 - P( \mathbb{\Phi}(x))\Bigr)
	\; . 
	\end{align*}
The (inhomogeneous) Klein--Gordon equation is obtained by demanding that for every such ${\rm K}$,
the action $S({\rm K}, \mathbb{\Phi}) $ is stationary with respect to smooth variations $\mathbb{\Phi} \mapsto \mathbb{\Phi} + \delta 
\mathbb{\Phi} $ of $\mathbb{\Phi} $ that vanish on the boundary $\partial K$ of $K$.  In other words, we require that
\[
0 = \frac{\delta S({\rm K}, \mathbb{\Phi})}{\delta \mathbb{\Phi}(y) } 
=  \int_{\rm K} {\rm d} \mu_{dS} (x) \Bigl( \frac{\partial {\mathcal L}(\mathbb{\Phi}(x))}{\partial \mathbb{\Phi}(y)} 
+ \frac{\partial {\mathcal L}(\mathbb{\Phi}(x))}
{\partial (\partial_\mu \mathbb{\Phi}(x))} \frac{\delta(\partial_\mu \mathbb{\Phi}(x))}{\delta\mathbb{\Phi}(y) } \Bigr)
\]
for every such ${\rm K}$. 
The resulting Euler-Lagrange equation
\[
\partial_\mu \frac{\partial {\mathcal L}}{\partial (\partial_\mu \mathbb{\Phi})} -  \frac{\partial {\mathcal L}}{\partial \mathbb{\Phi}}  = 0 
\]
is 
the \emph{equation of motion}\index{equation of motion}\footnote{In local coordinates, the Laplace-Beltrami operator 
$\square_{dS}$ equals $ |g|^{-1/2}\partial_\mu g^{\mu\nu}|g|^{1/2}\partial_\nu $,  with $ |g|\equiv | {\rm det} g| $. } 
\[
\partial_\mu \Bigl( \sqrt{|g|} \, g^{\mu \nu} \partial_\nu  \mathbb{\Phi}  \Bigr) +  \sqrt{|g|} \Bigl( \mu^2 \mathbb{\Phi} + P'  (\mathbb{\Phi} ) \Bigr) = 0 
\]
on $dS$. In a more compact notation, this equation is rewritten as 
\label{squarepage}
	\begin{equation}
		\label{3.25}
		(\square_{dS}+\mu^2) \mathbb{\Phi}  = - P'  (\mathbb{\Phi} ) \; , 
		\qquad \mathbb{\Phi} \in C^\infty (dS) \;  ,   \quad \mu>0 \; ,
	\end{equation}
where $\mu$ is a mass\footnote{How $\mu$ is related (or identical) to 
the physically observable mass of a particle on de Sitter space  will be 
discussed elsewhere. See \cite{GaT} 
for a discussion of several interpretations of $\mu$ found in the literature.} parameter. In the sequel we keep $\mu>0$ fixed, and 
although almost all quantities we encounter depend on $\mu$, we will suppress this dependence on $\mu$ in the notation. 

\section{Conservation Laws}
\label{SET}

\color{black}
The advantage of the Lagrangian formulation is that any one-parameter subgroup, which leaves the Lagrangian density invariant, 
gives rise to a conservation law\footnote{In \cite[p.~269]{FHN}, 
the authors have chosen $K= W_1$, and thus the action $S({\rm K}, \mathbb{\Phi}) $ yields
	\[
		S(W_1, {\mathbb \Phi}) = \frac{1}{2} 
		\int_{W_1} r^2 \cos \psi \, {\rm d} \psi  \, {\rm d} t  \; 
		\Bigl(  r^{-2} \cos^{-2} \psi \bigl( \tfrac{ \partial {\mathbb \Phi} }{\partial t}\bigr)^2 - 
		r^{-2} \bigl( \tfrac{ \partial {\mathbb \Phi} }{\partial \psi}\bigr)^2 - \mu^2 {\mathbb \Phi}^2 
		- P (\mathbb{\Phi}) \Bigr) \; .
	\]
The invariance with respect to translations of $t$ yields the conserved quantity 
\begin{align*}
L_{1 \upharpoonright I_+ } 
& =  
\int_{I_+}  r \cos \psi \, {\rm d} \psi \;   T_{00} \; . 
\end{align*}
In the last equation we have used  $n= r^{-1} \cos^{-1}\psi \partial_t$ and $n \mathbb{\phi} = \mathbb{\pi}$. 
Here $n$ denotes unit normal, future pointing vector field, restricted to the Cauchy surface~${\mathcal C}$.}. 
If the 2-form ${\mathcal L}$ depends on the scalar $\mathbb{\Phi}$, then its variation 
	\[
		\delta {\mathcal L} =  \delta \mathbb{\Phi} \wedge 
		\left[ \tfrac{\partial {\mathcal L} }{ \partial \mathbb{\Phi}}  -  d \tfrac{\partial {\mathcal L} }{ \partial (d \mathbb{\Phi}) } \right] 
		+ d \Bigl(  \delta\mathbb{\Phi} \wedge  \tfrac{\partial {\mathcal L} }{ \partial (d \mathbb{\Phi}) } \Bigr) \; , 
	\]
and the equations of motion  
	\[
		\tfrac{\partial {\mathcal L} }{ \partial \mathbb{\Phi}}  -  d \tfrac{\partial {\mathcal L} }{ \partial (d \mathbb{\Phi}) } = 0 \; . 
	\]
imply that 
	\[
		\delta {\mathcal L} = 0 \quad \Rightarrow \quad
		d \Bigl(  \delta\mathbb{\Phi} \wedge  \tfrac{\partial {\mathcal L} }{ \partial (d \mathbb{\Phi}) } \Bigr) = 0 \; . 
	\]
If the variation results from a Lie derivative $L_v$, with $v$ some vector field, then 
	\[
		\delta {\mathcal L} = L_v {\mathcal L} = d ( i_v {\mathcal L}) 
	\]
as the exterior derivative of the 2-form vanishes in two-dimensional space. It follows that 
	\[
		\sum_j \delta\mathbb{\Phi} \wedge  \frac{\partial {\mathcal L} }{ \partial (d \mathbb{\Phi}) }
	\]
is a closed 1-form, or, using the Hodge $*$,  a conserved current. 

\begin{theorem}[Noether]
\label{noether} 
Let $v \in E_1$ with $\delta \Phi = L_v \Phi$ and $\delta {\mathcal L} = L_v {\mathcal L}$. It follows that
	\[
		d \left[  L_v \mathbb{\Phi} \wedge \tfrac{\partial {\mathcal L}}{\partial (d \mathbb{\Phi})} 
		\right] =: - d * T_v = 0 \; . 
	\]
\end{theorem}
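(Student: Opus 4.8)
The statement is the classical Noether identity: a one–parameter symmetry of the Lagrangian $2$–form ${\mathcal L}$ produces a divergence–free current, equivalently a closed $1$–form. The strategy is purely an exercise in the Cartan calculus of differential forms, combined with the Euler–Lagrange equation, exactly as sketched in the paragraphs preceding the statement. First I would recall the general variational identity for a first–order Lagrangian density that is a $2$–form depending on $\mathbb{\Phi}$ and $d\mathbb{\Phi}$,
	\[
		\delta {\mathcal L} = \delta \mathbb{\Phi} \wedge
		\left[ \tfrac{\partial {\mathcal L} }{ \partial \mathbb{\Phi}}  -  d\, \tfrac{\partial {\mathcal L} }{ \partial (d \mathbb{\Phi}) } \right]
		+ d \Bigl(  \delta\mathbb{\Phi} \wedge  \tfrac{\partial {\mathcal L} }{ \partial (d \mathbb{\Phi}) } \Bigr) \; ,
	\]
which is just the Leibniz rule for $d$ applied to the ``boundary term''. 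This identity holds for \emph{any} variation $\delta\mathbb{\Phi}$, in particular for one induced by the flow of the vector field $v\in E_1$, i.e.\ $\delta\mathbb{\Phi}=L_v\mathbb{\Phi}$.

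\textbf{Key steps.} Second, I would invoke the hypothesis that $\mathbb{\Phi}$ solves the equations of motion, so that the bracketed Euler–Lagrange term $\tfrac{\partial {\mathcal L}}{\partial \mathbb{\Phi}} - d\,\tfrac{\partial {\mathcal L}}{\partial(d\mathbb{\Phi})}$ vanishes identically on $dS$; this collapses the variational identity to
	\[
		\delta {\mathcal L} = d \Bigl(  L_v\mathbb{\Phi} \wedge  \tfrac{\partial {\mathcal L} }{ \partial (d \mathbb{\Phi}) } \Bigr) \; .
	\]
Third, I would compute the left–hand side independently using the fact that variation along a flow is a Lie derivative, $\delta{\mathcal L}=L_v{\mathcal L}$, and Cartan's magic formula $L_v = d\, i_v + i_v\, d$. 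Since ${\mathcal L}$ is a top–degree ($2$–)form on the two–dimensional manifold $dS$, we have $d{\mathcal L}=0$ automatically, hence $L_v{\mathcal L}=d(i_v{\mathcal L})$. Fourth, combining the two expressions for $\delta{\mathcal L}$ gives
	\[
		d\Bigl( L_v\mathbb{\Phi}\wedge\tfrac{\partial{\mathcal L}}{\partial(d\mathbb{\Phi})} - i_v{\mathcal L}\Bigr)=0 \; .
	\]
Finally, I would simply rename this closed $1$–form: set $* T_v \doteq i_v{\mathcal L} - L_v\mathbb{\Phi}\wedge\tfrac{\partial{\mathcal L}}{\partial(d\mathbb{\Phi})}$, so that the displayed equation reads $d * T_v = 0$, which is precisely the assertion. (One checks that this definition of $T_v$ agrees with the stress–energy tensor contracted with $v$ in the standard way by writing everything in local coordinates and using $\square_{dS}=|g|^{-1/2}\partial_\mu g^{\mu\nu}|g|^{1/2}\partial_\nu$; I would relegate that coordinate verification to a remark rather than include it in the proof proper.)

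\textbf{Main obstacle.} There is no deep difficulty here — the content is entirely formal. The only point requiring a modicum of care is the bookkeeping of signs and the degrees of the forms: $\delta\mathbb{\Phi}$ is a $0$–form, $\tfrac{\partial{\mathcal L}}{\partial(d\mathbb{\Phi})}$ is a $1$–form, their wedge is a $1$–form, and $i_v$ lowers degree by one so $i_v{\mathcal L}$ is also a $1$–form — everything is consistent in two dimensions, but one must be scrupulous that $d{\mathcal L}=0$ is used only because $\dim dS = 2$ (the argument would need the explicit closedness/on–shell argument in higher dimensions). The other mildly delicate point is making precise the class $E_1$ of admissible vector fields and the requirement $\delta{\mathcal L}=L_v{\mathcal L}$, i.e.\ that $v$ genuinely be a symmetry of the Lagrangian and not merely of the equations of motion; I would state this as a standing hypothesis (as the theorem already does) rather than analyse it. In short, the proof is a three–line computation once the Cartan calculus is set up, and the write–up should emphasize clarity of the form–degree bookkeeping over any technical machinery.
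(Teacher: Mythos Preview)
Your proposal is correct and follows exactly the approach the paper itself sketches in the paragraphs immediately preceding the theorem: the variational identity, the on-shell vanishing of the Euler--Lagrange term, and Cartan's formula $L_v{\mathcal L}=d(i_v{\mathcal L})$ on a top-degree form. If anything you have been more careful than the paper, since you make explicit that the closed $1$-form is $L_v\mathbb{\Phi}\wedge\tfrac{\partial{\mathcal L}}{\partial(d\mathbb{\Phi})}-i_v{\mathcal L}$ (the $i_v{\mathcal L}$ contribution is left somewhat implicit in the paper's display and in the preceding sentence ``It follows that $\sum_j\delta\mathbb{\Phi}\wedge\tfrac{\partial{\mathcal L}}{\partial(d\mathbb{\Phi})}$ is a closed $1$-form'').
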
 

\begin{remark}
If $L_v$ is a translation (\emph{i.e.}, $v = dx_\mu$, $\mu= 0, 1, \ldots , d $), 
on Min\-kowski space $\mathbb{R}^{1+d}$, then $T^\mu$ (as defined in Theorem \ref{noether}) is called the energy-momen\-tum current.
Its components with respect to a basis define the \emph{energy-momentum tensor}~${T^\mu}_\nu$:
	\[
		{T^\mu} = {T^\mu}_\nu dx^\nu \; . 
	\]
If one integrates over a space-like surface ${N_d}$, than one finds the energy $P_0$ and the momentum $P^j$:
	\[
		P^ \mu = \int_{N_d} * {T^\mu} \; . 
	\]
$T_{\mu \nu}$ describes the flux of the $\mu$-th component of the conserved energy-momentum vector across a surface 
with constant $x_\nu$ coordinate (see, \emph{e.g.}, \cite[p.~35]{Thirring}). 
\end{remark}

\bigskip
A convenient basis to derive explicit expressions for the stress-energy tensor on de Sitter space is the following:
	\[
		\left(\begin{matrix} 
			x_0 	\\
			x_1	\\
			x_2
		\end{matrix}\right) 
		= 
		\left(\begin{matrix} 
			x_0 \\ 
			\sqrt{ r^2 + x_0^2 } \sin \psi \\ 
			\sqrt{ r^2 + x_0^2 } \cos \psi
		\end{matrix}\right)\; , \qquad x_0 \in \mathbb{R} \; , \quad \psi \in [0, 2 \pi) \; . 
	\] 
The metric takes the form $g= {\rm d} x_0 \otimes {\rm d}  x_0 -  \sqrt{r^2 + x_0^2}  {\rm d} \psi \otimes {\rm d} \psi $ and
the stress-energy tensor is given by
\begin{align*}
	{T^{\mu}}_{\nu} & = \partial^\mu \mathbb{\Phi} \partial_\nu \mathbb{\Phi}  -  g^{\mu\kappa} 
	g_{\kappa \nu}{\mathcal L}(\mathbb{\Phi}) \\
        &= \partial^\mu \mathbb{\Phi} \partial_\nu \mathbb{\Phi} - \tfrac{1}{2} {\delta^{\mu}}_{\nu} 
	\bigl(  \partial^\kappa \mathbb{\Phi}\partial_\kappa \mathbb{\Phi} \bigr)
	+ \tfrac{1}{2} {\delta^{\mu}}_{\nu}  \bigl(\mu^ 2 \mathbb{\Phi}^2  + P (\mathbb{\Phi}) \bigr) 	\; , \qquad \mu, \nu = x_0, \psi \; . 
\end{align*}
In particular, 
	\begin{align*}
		T_{00} &= \frac{1}{2} \left( \mathbb{\pi}^2 +   r^{-2} \bigl( \partial_\psi  \mathbb{\Phi} \bigr)^2 + 
					\mu^2 \mathbb{\Phi}^2 + P  (\mathbb{\Phi} ) \right) \; ,  
	\end{align*}
with $\mathbb{\pi} = \tfrac{\partial}{\partial x_0} \mathbb{\Phi}$, and 
	\begin{equation}
	\label{T01}
		T_{0\psi} =  \frac{1}{r} \partial_{x_0} \mathbb{\Phi}\, \partial_\psi \mathbb{\Phi} \; .  
	\end{equation}

The Killing vector fields on $dS$ are given by $\partial_t $ (within the double cone $\mathbb{W}_1$, using the coordinates 
introduced in \eqref{w1psi}) and
$\partial_\psi $. The corresponding conserved quantities 
	\[
		L^1 = \int_{S^1} *  T^t \quad \text{and} \quad
		K_0 = \int_{S^1} * T^\psi  		 
	\]
generate the $\Lambda_1$-boosts 
and the rotations around the $x_0$-axis, respectively.  Rewriting $\partial_t\mathbb{\Phi}$ as 
	\[
		\partial_t\mathbb{\Phi}= r\cos\psi \, n \mathbb{\Phi} \equiv r\cos\psi \, \mathbb{\pi} \; , 
	\]
where $n$ is the future directed normal vector field to the time-circle $x_0 =0$, we have
\footnote{Using the coordinates introduced in \eqref{w1psi} we have 
$g=r^2 ( \cos^2 \psi \, {\rm d} t\otimes {\rm d} t  - {\rm d}\psi \otimes {\rm d}\psi )$ and $ |g|^{1/2}= r^2 |\cos\psi| $.}
	\begin{align}
		L^1 = \tfrac{1}{2}\,\int_{S^1} r \, \cos\psi \; {\rm d} \psi \; \big( 
							\mathbb{\pi}^2 + r^{-2} (\partial_\psi\mathbb{\Phi})^2 +
							\mu^2\mathbb{\Phi}^2 + P(\mathbb{\Phi})\big) \; . 
	\end{align}
Using \eqref{T01}
yields the formula for the angular momentum   
	\[ 
		K_0 = \int_{S^1} \,  r^2 \,  |\cos\psi| \; {\rm d} \psi \;  T_{0\psi}
		=  \int_{S^1} \,  r \, {\rm d} \psi \;   \mathbb{\pi} \, (\partial_\psi \mathbb{\Phi})  \; .
	\]

\begin{remark} 
\label{classical-energy}
Integrating $T_{00}$ over the time-zero circle $S^1$ yields a positive quantity, 
	\[
		\int_{S^1} r\, {\rm d} \psi \; 	T_{00} (\psi) > 0 \; ,  
	\]
which may be interpreted as the energy density for 
the classical $P(\varphi)_2$ model on the \emph{Einstein universe} (see, \emph{e.g.}, \cite{Fe1}\cite{Fe2}), \emph{i.e.}, 
the space-time of the form $S^1 \times \mathbb{R}$. 
\end{remark}

Although there are interesting results concerning the  non-linear Klein-Gordon equation in two space-time dimensions
(see, \emph{e.g.}, \cite{Delort-1, Delort-2, Delort-3, Delort-4, Delort-5, GV, Lebeau}), we will concentrate on free fields for the rest of this chapter. 

\section{The covariant classical dynamical system}
\label{CCDS}

As mentioned in Section \ref{geodesics}, the de Sitter space-time $dS$ is  globally hyperbolic. Thus the
inhomogeneous Klein--Gordon equation
	\begin{equation}
		 \label{4.3n}
			 (\square_{dS}+\mu^2)  \mathbb{\Phi}   
		 =  f \; , \qquad f \in{\mathcal D}_\mathbb{R} (dS) \; , 
	\end{equation}
has smooth solutions, which are uniquely specified by fixing their support properties  
(see \cite{C-B,D77, L, Lich}):

\begin{theorem}
\label{fundamental}
There exist unique operators 
	\[
		\mathbb{E}^{\pm} \colon{\mathcal D}_{\mathbb{R}} (dS) \to C^\infty (dS) 
	\]
such that  $\mathbb{E}^\pm f $ is a solution of the inhomogenous equation \eqref{4.3n} with
	\[
		{\rm supp\,}  (\mathbb{E}^\pm f)   \subset \Gamma^\pm ( {\rm supp\,} f) \quad  \text{and} \quad 
		{\rm supp\,} ( \mathbb{E}^\pm f ) \cap \Gamma^\mp ( {\rm supp\,} f) \;  \text{compact}. 
	\] 
\end{theorem}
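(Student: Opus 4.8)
The statement is the standard existence-and-uniqueness theorem for the retarded and advanced Green's operators of a normally hyperbolic operator on a globally hyperbolic spacetime, specialised to $(\square_{dS}+\mu^2)$ on $dS$. I would therefore not reprove it from scratch but reduce it to the general theory (the cited references \cite{C-B,D77,L,Lich}, or equivalently the B\"ar--Ginoux--Pf\"affle treatment). The plan is to verify that the hypotheses of that general theory are met here: namely that $dS$ is globally hyperbolic (established in Section~\ref{geodesics}, where $dS\cong S^1\times\mathbb R$ with $S^1$ a Cauchy surface, and $\Gamma^-(x)\cap\Gamma^+(y)$ compact), and that $\square_{dS}+\mu^2$ is a normally hyperbolic (wave-type) operator, i.e.\ its principal symbol is $g^{\mu\nu}\xi_\mu\xi_\nu$, which is immediate from the local coordinate expression $\square_{dS}=|g|^{-1/2}\partial_\mu g^{\mu\nu}|g|^{1/2}\partial_\nu$ recorded above \eqref{3.25}.

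\textbf{Key steps.} First I would fix notation: for $f\in{\mathcal D}_{\mathbb R}(dS)$ with $K\doteq{\rm supp}\,f$ compact, seek $u^\pm$ solving $(\square_{dS}+\mu^2)u^\pm=f$ with ${\rm supp}\,u^\pm\subset\Gamma^\pm(K)$. \emph{Existence}: since $K$ is compact and $dS$ globally hyperbolic, one can exhaust $dS$ by a sequence of causally compatible, globally hyperbolic subregions (e.g.\ $\mathbb W_1$-type double cones, or truncated spacetimes $S^1\times(-n,n)$) containing $K$; on each such region the local Cauchy problem for the wave equation with data $0$ on a Cauchy surface to the past of $K$ has a unique smooth solution by the classical energy-estimate / Leray theory, and these solutions are mutually consistent by finite propagation speed (Section~\ref{sec:fsop}, Lemma~\ref{FSoL} and Proposition~\ref{ialpha}), hence patch to a global $u^+\in C^\infty(dS)$; the support statement ${\rm supp}\,u^+\subset\Gamma^+(K)$ and compactness of ${\rm supp}\,u^+\cap\Gamma^-(K)$ follow from the domain-of-dependence property, using that $\Gamma^+(K)\cap\Gamma^-(\text{compact})$ is compact in a globally hyperbolic spacetime. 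The construction of $u^-$ is identical with time reversed. \emph{Uniqueness}: if $v\in C^\infty(dS)$ solves $(\square_{dS}+\mu^2)v=0$ with ${\rm supp}\,v\subset\Gamma^+(K)$ for some compact $K$, then on each Cauchy surface $\Lambda_1(t)S^1$ to the past of $K$ the Cauchy data of $v$ vanish, so by uniqueness for the Cauchy problem $v\equiv0$ to the past of $K$; a connectedness/propagation argument pushing the "zero region" forward then gives $v\equiv0$ everywhere, whence $\mathbb E^\pm$ is well-defined as a map.

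\textbf{Main obstacle.} The genuinely non-trivial analytic input is the local solvability of the Cauchy problem for the wave operator with smooth data and the attendant energy estimates guaranteeing smooth dependence and finite propagation speed; this is exactly what is black-boxed from \cite{C-B,D77,L,Lich}. The only thing requiring care specific to $dS$ is the bookkeeping with the two degenerate points $(0,\pm r,0)$ where the $(t,\psi)$-chart \eqref{w1psi} breaks down and with the compactness of the Cauchy surface $S^1$ — but global hyperbolicity has already been established intrinsically in Section~\ref{geodesics} without reference to any chart, so this is a presentational matter, not a real difficulty. I would therefore present the proof as: (i) cite global hyperbolicity and normal hyperbolicity; (ii) invoke the general Green's-operator theorem to obtain $\mathbb E^\pm$; (iii) spell out that the support properties in the statement are precisely those delivered by that theorem, using the compactness of $\Gamma^\pm(K)\cap\Gamma^\mp(K')$ noted earlier. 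The linearity of $\mathbb E^\pm$ is then automatic from uniqueness.
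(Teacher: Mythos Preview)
Your proposal is correct and matches the paper's approach exactly: the paper does not give a proof of this theorem but simply states it as a consequence of global hyperbolicity of $dS$ (established earlier) together with the classical references \cite{C-B,D77,L,Lich}. Your plan to verify the hypotheses (global hyperbolicity plus normal hyperbolicity of $\square_{dS}+\mu^2$) and then invoke the general Green's-operator theorem is precisely what the paper has in mind, only spelled out in more detail.
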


$\mathbb{E}^{\pm}f$ are called the \emph{retarded}\index{retarded solution} and the 
\emph{advanced solution}\index{advanced solution}, respectively. 
The difference between the retarded and the advanced solution of the inhomogeneous equation \eqref{4.3n}, 
namely 
	\begin{equation}
	\label{Ef}
	\mathbb{\Phi}=\mathbb{E} f \; , \qquad \text{with} \quad  \mathbb{E} = \mathbb{E}^{+} - \mathbb{E}^{-} \; , 
	\end{equation}
is a solution of the homogenous Klein--Gordon equation \eqref{3.25}.

\begin{remark}
For comparison, we briefly recall the situation on Minkowski space $\mathbb{R}^{1+1}$.
After Fourier transformation, the inhomogeneous equation
	\[
		(\square_{\mathbb{R}^{1+1}} + m^2)G = - \delta \; , 
	\]
takes the simple form $(- {\tt P}_0^2 + {\tt P}_1^2 + m^2) \widetilde {G} = -1$; the latter has the retarded and advanced 
propagators as its solution: 
	\[
		{\mathscr E}_{adv}( {\tt x}, {\tt y} ) = \lim_{\epsilon \downarrow 0} \frac{1}{(2 \pi)^2} 
			\int {\rm d}^2 p \frac{ {\rm e}^{- i {\tt p} \cdot ({\tt x}- {\tt y})}}
			{({\tt p}_0 - i \epsilon)^2 - {\tt p}_1^2 - m^2} \; , 
	\]
and
	\[
		{\mathscr E}_{ret}( {\tt x}, {\tt y} ) = \lim_{\epsilon \downarrow 0} \frac{1}{(2 \pi)^2} 
			\int {\rm d}^2 p \frac{ {\rm e}^{- i {\tt p} \cdot ({\tt x}- {\tt y})}}
			{({\tt p}_0 + i \epsilon)^2 - {\tt p}_1^2 - m^2} \; , 
	\]
The difference ${\mathscr E}( {\tt x}, {\tt y} ) \doteq {\mathscr E}_{ret}( {\tt x}, {\tt y} ) - {\mathscr E}_{adv}( {\tt x}, {\tt y} )$ 
is a bi-solution of the Klein-Gordon equation. 	Note that if 
	\[
	{\tt p} \cdot ({\tt x}- {\tt y})=0
	\]
the integral is defined in distributional sense only. We will soon encounter a similar problem on de Sitter space. 
\end{remark}

\bigskip
As we will see next, {\em any} smooth solution of \eqref{3.25} is of the type \eqref{Ef}.

\goodbreak
\begin{theorem}[B\"ar, Ginoux and Pf\"affle \cite {BGP},Theorem 3.4.7]
\label{solutions}
\quad
\begin{itemize}
\item[$ i.)$]Any smooth solution $\mathbb{\Phi}$ of the Klein--Gordon equation \eqref{3.25} 
may be written in the form 
	\[
	\mathbb{\Phi} = \mathbb{E} f \; , \qquad \text {for some} \; f \in {\mathcal D}_{\mathbb{R}}(dS) \; ; 
	\]
and, given any neighbourhood \footnote{In \cite{BMJ}
we will demonstrate that for the $P(\varphi)_2$ 
model on the de Sitter space, the expectation values of all observables can 
be predicted from the expectation values of observables, 
which can be measured within an {\em arbitrarily small} time interval. Thus 
the $P(\varphi)_2$ model on the de Sitter space satisfies the Time-Slice
Axiom~\cite{CF}.}  ${\mathscr N}$  of a Cauchy surface ${\mathcal C}$, one may choose such an $f \in {\mathcal D}({\mathscr N})$. 
\item[$ ii.)$] We have 
	\[
		\ker \mathbb{E} = (\square_{dS}+\mu^2) {\mathcal D}_{\mathbb{R}} (dS) \; . 
	\]
In consequence, the {\em space of smooth real-valued solutions} equals 
	\begin{equation*}
	\mathbb{E}  {\mathcal D}_{\mathbb{R}} (dS) \cong {\mathcal D}_{\mathbb{R}} (dS) / 
		(\square_{dS}+\mu^2) {\mathcal D}_{\mathbb{R}} (dS) \doteq {\mathfrak k}(dS) \; .  
	\end{equation*}
\end{itemize}
\end{theorem}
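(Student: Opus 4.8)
The plan is to prove the two assertions in turn, relying on the fundamental solution theorem (Theorem~\ref{fundamental}) and the known structure of the propagator $\mathbb{E} = \mathbb{E}^+ - \mathbb{E}^-$. For part~$i.)$, given a smooth real solution $\mathbb{\Phi}$ of \eqref{3.25}, I would first pick a smooth partition of unity $\{\chi_+, \chi_-\}$ subordinate to the global splitting $dS \cong \mathcal{C}\times\mathbb{R}$ induced by a Cauchy surface $\mathcal{C}$, with $\chi_+$ equal to $1$ in the far future and $0$ in the far past (and $\chi_-=1-\chi_+$), and set $f \doteq (\square_{dS}+\mu^2)(\chi_+\mathbb{\Phi})$. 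Since $(\square_{dS}+\mu^2)\mathbb{\Phi}=0$, the function $f$ is supported in the region where $\nabla\chi_+\neq 0$, which can be arranged to lie in any prescribed neighbourhood $\mathscr{N}$ of $\mathcal{C}$; in particular $f\in\mathcal{D}_\mathbb{R}(\mathscr{N})$, and $f$ is real. The key step is then to verify $\mathbb{E}f=\mathbb{\Phi}$. This follows by computing $\mathbb{E}^+f$ and $\mathbb{E}^-f$ separately: by uniqueness of the retarded solution (Theorem~\ref{fundamental}) and the support property ${\rm supp}(\mathbb{E}^+f)\subset\Gamma^+({\rm supp}\,f)$, one checks that $\chi_+\mathbb{\Phi}$ and $\mathbb{E}^+f$ differ by a solution of the homogeneous equation whose support lies in $\Gamma^+({\rm supp}\,f)\cup{\rm supp}(\chi_+\mathbb{\Phi})$ and which vanishes in the far past; global hyperbolicity (existence and uniqueness of the Cauchy problem on $dS$) forces this difference to be zero, so $\mathbb{E}^+f=\chi_+\mathbb{\Phi}$. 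Symmetrically $\mathbb{E}^-f=-\chi_-\mathbb{\Phi}$, whence $\mathbb{E}f=\mathbb{E}^+f-\mathbb{E}^-f=\chi_+\mathbb{\Phi}+\chi_-\mathbb{\Phi}=\mathbb{\Phi}$.

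For part~$ii.)$, I would prove the two inclusions $\ker\mathbb{E}\supseteq(\square_{dS}+\mu^2)\mathcal{D}_\mathbb{R}(dS)$ and $\ker\mathbb{E}\subseteq(\square_{dS}+\mu^2)\mathcal{D}_\mathbb{R}(dS)$. The first is immediate: if $f=(\square_{dS}+\mu^2)h$ with $h\in\mathcal{D}_\mathbb{R}(dS)$, then $h$ itself is a compactly supported solution of the inhomogeneous equation \eqref{4.3n}, so by the uniqueness clause of Theorem~\ref{fundamental} both $\mathbb{E}^+f$ and $\mathbb{E}^-f$ equal $h$ (each having support in $\Gamma^\pm({\rm supp}\,f)$ and agreeing with the compactly supported solution $h$), hence $\mathbb{E}f=0$. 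For the reverse inclusion, suppose $\mathbb{E}f=0$, i.e.\ $\mathbb{E}^+f=\mathbb{E}^-f=:h$. Then $h$ has support contained in $\Gamma^+({\rm supp}\,f)\cap\Gamma^-({\rm supp}\,f)$, which is compact by global hyperbolicity of $dS$ (Section on Cauchy surfaces), so $h\in C_c^\infty(dS)=\mathcal{D}_\mathbb{R}(dS)$ — and it is real since $f$ is. By construction $(\square_{dS}+\mu^2)h=f$, so $f\in(\square_{dS}+\mu^2)\mathcal{D}_\mathbb{R}(dS)$. The displayed isomorphism ${\mathfrak k}(dS)=\mathbb{E}\mathcal{D}_\mathbb{R}(dS)\cong\mathcal{D}_\mathbb{R}(dS)/(\square_{dS}+\mu^2)\mathcal{D}_\mathbb{R}(dS)$ is then just the first isomorphism theorem applied to the linear surjection $\mathbb{E}\colon\mathcal{D}_\mathbb{R}(dS)\to\mathbb{E}\mathcal{D}_\mathbb{R}(dS)$ with kernel just computed, combined with part~$i.)$, which shows the image of $\mathbb{E}$ coincides with the full space of smooth real solutions.

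The main obstacle I anticipate is the gluing/uniqueness argument in part~$i.)$: one has to be careful that $\chi_+\mathbb{\Phi}$, although not a solution of any homogeneous equation globally, \emph{is} annihilated by $\square_{dS}+\mu^2$ outside a neighbourhood of $\mathcal{C}$, and that the difference $\mathbb{E}^+f - \chi_+\mathbb{\Phi}$ vanishes identically rather than merely having controlled support. The cleanest way to close this is to observe that $\mathbb{E}^+f-\chi_+\mathbb{\Phi}$ solves the \emph{homogeneous} Klein--Gordon equation on all of $dS$ (both terms map to $f$ under $\square_{dS}+\mu^2$), has past-compact support (it vanishes in the far past since $\mathbb{E}^+f$ does by the retarded support property and $\chi_+\mathbb{\Phi}$ does by choice of $\chi_+$), and then invoke the standard fact that a solution of the homogeneous equation on a globally hyperbolic space-time with support not meeting some Cauchy surface must be zero — this is exactly the uniqueness part of the well-posedness of the Cauchy problem, already used implicitly in Theorem~\ref{fundamental}. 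Everything else is routine bookkeeping with supports and the causal sets $\Gamma^\pm$.
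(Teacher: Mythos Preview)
The paper does not actually give its own proof of this theorem: it is stated with attribution to B\"ar, Ginoux and Pf\"affle \cite{BGP}, Theorem~3.4.7, and the text immediately proceeds to use the result without a proof environment. So there is nothing in the paper to compare your argument against.

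That said, your proposal is correct and is precisely the standard proof one finds in \cite{BGP}. The cut-off construction $f=(\square_{dS}+\mu^2)(\chi_+\mathbb{\Phi})$ with $\chi_+$ a temporal cut-off supported near $\mathcal{C}$, followed by the uniqueness argument showing $\mathbb{E}^\pm f=\pm\chi_\pm\mathbb{\Phi}$, is exactly how part~$i.)$ is handled there; your treatment of the anticipated obstacle (that $\mathbb{E}^+f-\chi_+\mathbb{\Phi}$ is a homogeneous solution vanishing in the past of some Cauchy surface, hence zero) is the right way to close the argument. For part~$ii.)$, your two inclusions via the compactness of $\Gamma^+({\rm supp}\,f)\cap\Gamma^-({\rm supp}\,f)$ and the uniqueness clause of Theorem~\ref{fundamental} are again the standard route. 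Nothing is missing.
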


Taking advantage of the properties $ i.)$ and $ ii.)$, we 
can define a projection
	\begin{align*}
			\mathbb{P} \colon {\mathcal D}_\mathbb{R}(dS) 
					& \to   {\mathfrak k} (dS) \nonumber \\
			 f & \mapsto   [f] \; . 
	\end{align*}
This yields a one-to-one correspondence between 
	\[ 
	{\mathfrak k} (dS) \ni [f] \longleftrightarrow \mathbb{f} \in \mathbb{E}  {\mathcal D}_{\mathbb{R}} (dS)  \; , 
	\]
with $[f]  \doteq \{ f + (\square_{dS}+\mu^2)h \mid h \in {\mathcal D}_{\mathbb{R}}(dS) \}$
and $\mathbb{f} \doteq \mathbb{E} f$. 

\begin{definition}
Subspaces of $ {\mathfrak k} (dS) $ associated to open space-time regions ${\mathcal O} \subset dS$ are defined by restricting $\mathbb{P}$
to ${\mathcal D}_\mathbb{R}({\mathcal O})$, \emph{i.e.}, 
	\[
			{\mathfrak k} ({\mathcal O}) 
			\doteq{\mathcal D}_\mathbb{R}({\mathcal O})/ \ker \mathbb{E}  \;. 
	\]
\end{definition}

${\mathfrak k} ({\mathcal O})$ will be used in Chapter \ref{2Q} to define 
{\em local} von Neumann algebras.

\bigskip
Embedding $ C^\infty (dS)$ into ${\mathcal D}'(dS)$ 
(see \cite[Sect.~2]{Fe1}\cite{Fe2}), the map~$\mathbb{E}$ 
gives rise to a bi\-distribution ${\mathcal E}$ on $dS\times dS$
	\begin{align} 
		\label{integralkernel}
			{\mathcal E}  ( f,g) \; &
			\doteq \int {\rm d} \mu_{dS} ( x) \;  f (x)( \mathbb{E} g)(x) 
			\nonumber \\
				&\doteq 
			\int {\rm d} \mu_{dS} ( x) {\rm d} \mu_{dS} ( y) \; 
			f ( x ) {\mathscr E} ( x ,  y )  
			g ( y  )  \; ,  
	\end{align}
antisymmetric in $f, g \in{\mathcal D}_\mathbb{R} 
(dS)$, whose kernel ${\mathscr E} ( x ,  y ) $, called the \emph{fundamental solution}\index{fundamental solution}, 
is a weak bisolution for the Klein--Gordon equation, 
\label{ccCfpage}
\label{vnpage}
	\begin{equation}
		 \label{cmzero} 
		  {\mathcal E}  \left( (\square_{dS}+\mu^2)
		  f,g \right)  
		  = {\mathcal E}  \left( f, 
		  (\square_{dS}+\mu^2)g \right) = 0 \; , 
	\end{equation}
with initial data\footnote{Micro-local analysis shows  that ${\mathscr E}$ and
its normal derivatives can be restricted to 
${{\mathcal C}} \times {{\mathcal C}}$, see~\cite{Hoerm}.}
	\begin{align}
		{{\mathscr E}}_{ \upharpoonright {\mathcal C} \times {\mathcal C} } 
				& =  0 \; , \label{eqESigma} \\  
				(n_\ell {\mathscr E})_{ \upharpoonright {{\mathcal C}} 
				\times {{\mathcal C}}}
				&=  -\delta_{{\mathcal C}} \; .  
				\label{eqESigma2}				
	\end{align}
Here $n_\ell $ denotes the vector field $n$ acting on  
the left variable $ x$ in ${\mathscr E} ( x,  y)$
and $\delta_{{\mathcal C}}$ is the integral kernel of the unit operator with 
respect to the 
induced measure on ${{\mathcal C}}$. 
The map 
\[
{\mathcal D}(dS) \ni f \mapsto \mathbb{f} = \mathbb{E} f 
\]
can now be viewed as  a convolution\footnote{On Minkowski space,  Fourier transformation
converts a convolution 
in position space to a multiplication in momentum space.
For the situation on $dS$, see Section \ref{Harmanaly}.} of a test function $f$ 
with the kernel ${\mathscr E}$, \emph{i.e.}, 
	\begin{equation}
		\label{feb-1}
			\mathbb{f} ( x ) 
				\doteq \int {\rm d} \mu_{dS} ( y ) \; {\mathscr E}( x ,  y ) f ( y ) \; , 
			\qquad  f \in{\mathcal D}_\mathbb{R} (dS) \; .  
	\end{equation}
Eq.~(\ref{feb-1}) implies  that $\mathbb{f}  ( x) =0$ for all $ x \in dS$,  iff  
	\begin{equation}
		 \label{eqEf}
		f \in 
		\ker {\mathcal E}  
		\doteq \left\{ f \in{\mathcal D}_\mathbb{R} (dS) 
		\mid  {\mathcal E}  (g, f)= 0 \; \; 
		\forall g \in{\mathcal D}_\mathbb{R} (dS) \right\}\;  .
	\end{equation}
In other words, $\ker \mathbb{E} = \ker {\mathcal E} $.  
Consequently, the 
bidistribution ${\mathcal E}$ provides a non-degenerated symplectic form  
$\sigma$ on the space of solutions ${\mathfrak k} (dS)$:
\label{sigmapage}
	\begin{equation}
	 	\label{eqSplForm}
		\qquad \qquad  \qquad \qquad 
		\sigma \bigl( [f] ,  [g] \bigr)
		\doteq {\mathcal E} (f,g) \; , 
		\qquad  f, g 
		\in{\mathcal D}_\mathbb{R}(dS)\; . 
	\end{equation}
As a consequence of \eqref{cmzero}, 
the right hand side does not dependent on the choice of the representatives
in the equivalence classes $[f]$ and $[g]$.  
Thus $({\mathfrak k} (dS), \sigma)$ is a the symplectic vector space.

\goodbreak
\begin{lemma} 
\label{Lm3.8}
Let $f \in{\mathcal D}_\mathbb{R}({\mathcal O})$, 
${\mathcal O}\subset dS$ an open region. Then 
$\mathbb{f} = \mathbb{E} f$ is a solution of the Klein--Gordon equation with
	\[
		{\rm supp\,} (\mathbb{f} ) 
		\subset\Gamma^+ ({\mathcal O}) 
		\cup \Gamma^- ({\mathcal O}) \; .  
	\]
In particular, if ${\mathcal O} \subset W$, then 
${\rm supp\,} (\mathbb{f} ) 
\subset dS \setminus \overline{\, W'}$.
\end{lemma}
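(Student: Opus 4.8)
The plan is to read off the support statement directly from the causal‐support properties of the retarded and advanced solutions, which are collected in Theorem \ref{fundamental}. First I would recall that $\mathbb{E} = \mathbb{E}^+ - \mathbb{E}^-$, where $\mathbb{E}^\pm f$ solves the inhomogeneous Klein--Gordon equation \eqref{4.3n} and, by Theorem \ref{fundamental}, satisfies ${\rm supp\,}(\mathbb{E}^\pm f) \subset \Gamma^\pm({\rm supp\,} f)$. Since $f \in {\mathcal D}_\mathbb{R}({\mathcal O})$ means ${\rm supp\,} f \subset {\mathcal O}$ (hence ${\rm supp\,} f \subset \overline{\mathcal O}$), and $\Gamma^\pm$ is monotone under inclusion of sets, we get ${\rm supp\,}(\mathbb{E}^\pm f) \subset \Gamma^\pm(\overline{\mathcal O}) \subset \Gamma^\pm({\mathcal O}) \cup \partial(\ldots)$; being slightly careful, $\Gamma^\pm({\rm supp\,} f) \subset \Gamma^\pm({\mathcal O})$ suffices because $\Gamma^\pm$ of a subset of ${\mathcal O}$ is contained in $\Gamma^\pm$ applied to any set containing it, and ${\rm supp\,} f$ is a compact subset of the open set ${\mathcal O}$. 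Therefore
\[
{\rm supp\,}(\mathbb{f}) = {\rm supp\,}(\mathbb{E}^+ f - \mathbb{E}^- f) \subset {\rm supp\,}(\mathbb{E}^+ f) \cup {\rm supp\,}(\mathbb{E}^- f) \subset \Gamma^+({\mathcal O}) \cup \Gamma^-({\mathcal O}) \; ,
\]
which is the first assertion. That $\mathbb{f}$ solves the homogeneous equation \eqref{3.25} is already recorded around \eqref{Ef}, so nothing further is needed there.

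For the second assertion, suppose ${\mathcal O} \subset W$ for a wedge $W$. The claim ${\rm supp\,}(\mathbb{f}) \subset dS \setminus \overline{W'}$ is equivalent to showing that $\Gamma^+({\mathcal O}) \cup \Gamma^-({\mathcal O})$ is disjoint from the closed opposite wedge $\overline{W'}$. The key geometric input is the remark in Subsection on wedges (item $iii.)$ of the \texttt{remarks} block): the union $\Gamma^+(W) \cup \Gamma^-(W')$ covers $dS$, and $\Gamma^+(W) \cap \Gamma^-(W')$ consists of two light rays. Dually — or by applying $\Theta_W$, which swaps $W$ and $W'$ while inverting time orientation — one also has that $\Gamma^+(W') \cup \Gamma^-(W)$ covers $dS$. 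I would argue as follows: if $y \in \Gamma^+({\mathcal O})$ then $y \in \Gamma^+(W)$ (monotonicity, since ${\mathcal O} \subset W$), and a point of $\Gamma^+(W)$ that also lay in $W'$ would have to be space-like to every point of $\overline W$, contradicting $y \in \Gamma^+(x)$ for some $x \in \overline{\mathcal O} \subset \overline W$ (a causal future point of $x$ is not space-like to $x$). Hence $\Gamma^+({\mathcal O}) \cap W' = \emptyset$, and similarly $\Gamma^-({\mathcal O}) \cap W' = \emptyset$. Passing to closures, a point in $\overline{W'}$ that lies in the open set ... one must be slightly more careful because ${\rm supp\,}(\mathbb{f})$ is closed; but $\Gamma^\pm({\mathcal O})$ with ${\mathcal O}$ open need not be closed, so I would instead note that ${\rm supp\,}(\mathbb{f}) \subset \overline{\Gamma^+({\mathcal O}) \cup \Gamma^-({\mathcal O})} = \overline{\Gamma^+(\overline{\mathcal O})} \cup \overline{\Gamma^-(\overline{\mathcal O})} \subset \Gamma^+(\overline{\mathcal O}) \cup \Gamma^-(\overline{\mathcal O})$ (these sets are already closed since $\overline{\mathcal O}$ is compact and $\Gamma^\pm$ of a compact set is closed), and $\overline{\mathcal O} \subset W$ is not quite true — only $\overline{\mathcal O} \subset \overline W$. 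So the cleanest route is to work with $\overline W$ throughout: $\Gamma^\pm(\overline{\mathcal O}) \subset \Gamma^\pm(\overline W)$, and $\Gamma^+(\overline W) \cup \Gamma^-(\overline W)$ is disjoint from the \emph{open} wedge $W'$ because every point of $W'$ is space-like to the edge of $W$, hence to points of $\overline W$ arbitrarily near the edge.

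The main obstacle I anticipate is precisely this boundary bookkeeping: reconciling "${\mathcal O}$ open, ${\mathcal O} \subset W$" with the closed sets $\Gamma^\pm$, $\overline{W'}$, and making sure the two light rays in $\Gamma^+(W) \cap \Gamma^-(W')$ (which lie in $\overline{W'}$ but not in the open $W'$) do not spoil the inclusion ${\rm supp\,}(\mathbb{f}) \subset dS \setminus \overline{W'}$. The resolution is that ${\rm supp\,} f$ is a \emph{compact} subset of the \emph{open} wedge $W$, so ${\rm supp\,} f$ is bounded away from the edge and the null boundary of $W$; consequently $\Gamma^\pm({\rm supp\,} f)$ is bounded away from the closed opposite wedge $\overline{W'}$, and in particular misses the two exceptional light rays. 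With that observation in place the two displayed inclusions combine to give ${\rm supp\,}(\mathbb{f}) \subset dS \setminus \overline{W'}$, completing the proof. Everything else is monotonicity of $\Gamma^\pm$ and the elementary fact that the union of supports dominates the support of a difference, both of which are standard and require no computation.
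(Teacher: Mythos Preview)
Your proposal is correct and rests on the same idea as the paper: the support properties of $\mathbb{E}^\pm$ from Theorem~\ref{fundamental}. The paper's proof is more terse and phrased via the kernel~${\mathscr E}(x,y)$ of the bidistribution~${\mathcal E}$ (it observes that ${\mathcal E}(f,g)$ vanishes for space-like separated supports, so $x\mapsto{\mathscr E}(x,y)$ is supported in $\Gamma^+(y)\cup\Gamma^-(y)$, and then simply states that the wedge statement ``follows from this fact as well''), whereas you bypass the kernel and argue directly with $\mathbb{E}^+f-\mathbb{E}^-f$. Your additional care with closures and the light-ray boundary of $\overline{W'}$ is sound---the decisive observation that ${\rm supp\,}f$ is compact in the \emph{open} wedge $W$ and hence bounded away from $\partial W$ is exactly what makes the inclusion work---but the paper does not spell this out.
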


\begin{proof}
The support properties of~$\mathbb{E}^{\pm}$ force 
${\mathcal E} ( f,g)$  to vanish, 
whenever the support of $f$ is space-like separated from that 
of $g$. Thus, for $ y \in dS$ fixed,  the distribution 
$ x \mapsto {\mathscr E} ( x ,  y )$
has support in  $\Gamma^+ ( y) \cup \Gamma^- ( y ) $. 
The final statement follows from  this fact as well.
\end{proof}

Exploring the one-to-one correspondence between 
${\mathfrak k} (dS) \ni [f]$ and $ \mathbb{f} \in \mathbb{E}  {\mathcal D}_{\mathbb{R}} (dS) $, 
this result can be rephrased in the following way.  

	\begin{lemma} 
\label{Lm4-9}
Let  $f \in{\mathcal D}_\mathbb{R}({\mathcal O})$, ${\mathcal O}\subset dS$ a bounded open region, and 
$g \in{\mathcal D}_\mathbb{R}({\mathcal O}')$, where ${\mathcal O}'$ denotes the space-like complement
of ${\mathcal O}$. Then 
	\begin{equation}
	\label{def:sigma}
	 		\sigma \bigl(   [f]  , [g] \bigr)= 0 \; .  
	\end{equation}
\end{lemma}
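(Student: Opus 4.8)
The statement to prove is Lemma~\ref{Lm4-9}: if $f \in {\mathcal D}_\mathbb{R}({\mathcal O})$ and $g \in {\mathcal D}_\mathbb{R}({\mathcal O}')$, then $\sigma([f],[g]) = 0$. The plan is to reduce this to the support properties of the fundamental solution ${\mathscr E}$ established in Lemma~\ref{Lm3.8} and its proof. Recall that by definition $\sigma([f],[g]) = {\mathcal E}(f,g) = \int {\rm d}\mu_{dS}(x)\, {\rm d}\mu_{dS}(y)\, f(x)\, {\mathscr E}(x,y)\, g(y)$.

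First I would invoke the key observation already recorded inside the proof of Lemma~\ref{Lm3.8}: the support properties of $\mathbb{E}^\pm$ (from Theorem~\ref{fundamental}) force ${\mathcal E}(f,g)$ to vanish whenever ${\rm supp\,}f$ is space-like separated from ${\rm supp\,}g$; equivalently, for fixed $y \in dS$ the distribution $x \mapsto {\mathscr E}(x,y)$ is supported in $\Gamma^+(y) \cup \Gamma^-(y)$. Since $f$ is supported in ${\mathcal O}$ and $g$ is supported in ${\mathcal O}'$, every point of ${\rm supp\,}f$ is space-like separated from every point of ${\rm supp\,}g$ by the definition of the space-like complement ${\mathcal O}'$ (namely ${\mathcal O}' = \{y \in dS \mid y \notin \Gamma^+(x) \cup \Gamma^-(x)\ \forall x \in \overline{{\mathcal O}}\}$). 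Hence the integrand of ${\mathcal E}(f,g)$ vanishes: for each $y \in {\rm supp\,}g$, the function $x \mapsto {\mathscr E}(x,y)$ has support in $\Gamma^+(y) \cup \Gamma^-(y)$, which is disjoint from ${\rm supp\,}f \subset {\mathcal O}$.

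Then I would note that this does not depend on the choice of representatives $f,g$ of the classes $[f],[g]$: this is exactly the content of \eqref{cmzero}, which says ${\mathcal E}((\square_{dS}+\mu^2)h, g) = {\mathcal E}(f, (\square_{dS}+\mu^2)k) = 0$, so $\sigma$ is well-defined on ${\mathfrak k}(dS)$ as already asserted after \eqref{eqSplForm}. Therefore $\sigma([f],[g]) = {\mathcal E}(f,g) = 0$, which is \eqref{def:sigma}.

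I do not expect a serious obstacle here; the lemma is essentially a restatement of the causal support properties of $\mathbb{E}$ that were already used in the proof of Lemma~\ref{Lm3.8}. The only point requiring a line of care is the matching of conventions: one must check that ${\rm supp\,}f$ being contained in ${\mathcal O}$ and ${\rm supp\,}g$ in ${\mathcal O}'$ genuinely puts the two supports in mutual space-like position (this is immediate from the definition of ${\mathcal O}'$, using that supports are closed and contained in the closures $\overline{{\mathcal O}}$, $\overline{{\mathcal O}'}$), and that the antisymmetry of ${\mathcal E}$ makes the argument symmetric in $f$ and $g$. A brief remark could also be added pointing out that this lemma is the classical (symplectic) precursor of locality/Einstein causality for the associated Weyl algebra, to be used in Chapter~\ref{2Q}.
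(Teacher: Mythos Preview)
Your proposal is correct and matches the paper's approach exactly: the paper presents Lemma~\ref{Lm4-9} without a separate proof, introducing it as a direct rephrasing of Lemma~\ref{Lm3.8} via the identification ${\mathfrak k}(dS) \ni [f] \leftrightarrow \mathbb{f}$, which is precisely the reduction to the support properties of ${\mathscr E}$ that you spell out.
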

 
\begin{proposition}
\label{Prop4-10}
The symplectic space $({\mathfrak k}(dS), \sigma)$ carries a representation 
	\[
		\Lambda \mapsto {\mathfrak u} (\Lambda)\; ,  \qquad \Lambda \in O(1,2) \; , 
	\]
of the Lorentz group. 
\label{ccTpage}
\end{proposition}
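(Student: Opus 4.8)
The plan is to exhibit the representation $\mathfrak{u}$ explicitly as the push-forward of the geometric action of $O(1,2)$ on test functions. First I would define, for each $\Lambda \in O(1,2)$, the operator on test functions
	\[
		\bigl(\Lambda_* f\bigr)(x) \doteq f(\Lambda^{-1} x) \; , \qquad f \in {\mathcal D}_\mathbb{R}(dS) \; ,
	\]
which is well-defined because $\Lambda$ is an isometry of $(dS, g)$ and hence preserves both $C^\infty$ and the compact-support condition; moreover $\Lambda \mapsto \Lambda_*$ is a representation on ${\mathcal D}_\mathbb{R}(dS)$ since $(\Lambda \Lambda')_* = \Lambda_* \Lambda'_*$. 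The key point is that the measure ${\rm d}\mu_{dS}$ is $O(1,2)$-invariant (it is the restriction of the Lorentz-invariant measure on $\mathbb{R}^{1+2}$, see Section~\ref{circle-mass-shell}), and that the Klein--Gordon operator $\square_{dS} + \mu^2$ commutes with the isometric action, i.e.\ $(\square_{dS} + \mu^2)(\Lambda_* f) = \Lambda_*\bigl((\square_{dS}+\mu^2) f\bigr)$, because $\square_{dS}$ is the Laplace--Beltrami operator of a metric that $\Lambda$ preserves.

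The second step is to show that $\Lambda_*$ descends to the quotient ${\mathfrak k}(dS) = {\mathcal D}_\mathbb{R}(dS)/\ker\mathbb{E}$. By Theorem~\ref{solutions}~$ii.)$ we have $\ker\mathbb{E} = (\square_{dS}+\mu^2){\mathcal D}_\mathbb{R}(dS)$, and the commutation just noted shows $\Lambda_*$ maps this subspace into itself. Hence the formula
	\[
		{\mathfrak u}(\Lambda)[f] \doteq [\Lambda_* f] \; , \qquad f \in {\mathcal D}_\mathbb{R}(dS) \; ,
	\]
is well-defined on ${\mathfrak k}(dS)$, and it is manifestly a representation of $O(1,2)$. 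Equivalently, under the one-to-one correspondence $[f] \leftrightarrow \mathbb{f} = \mathbb{E}f$, one has ${\mathfrak u}(\Lambda)\mathbb{f} = \mathbb{f}\circ\Lambda^{-1}$, which is again a smooth solution of \eqref{3.25} by isometry invariance of the equation.

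Third, I would verify that ${\mathfrak u}(\Lambda)$ is symplectic, i.e.\ $\sigma\bigl({\mathfrak u}(\Lambda)[f], {\mathfrak u}(\Lambda)[g]\bigr) = \sigma([f],[g])$. By \eqref{eqSplForm} this amounts to ${\mathcal E}(\Lambda_* f, \Lambda_* g) = {\mathcal E}(f,g)$, which follows once one knows the advanced/retarded fundamental solutions transform covariantly: $\mathbb{E}^\pm(\Lambda_* f) = \Lambda_*(\mathbb{E}^\pm f)$. This in turn follows from the uniqueness in Theorem~\ref{fundamental} --- the right-hand side is a solution of the inhomogeneous equation with source $\Lambda_* f$ and has support in $\Gamma^\pm({\rm supp}\,\Lambda_* f) = \Lambda\,\Gamma^\pm({\rm supp}\, f)$ (using that $\Lambda$ preserves the causal structure, Section~\ref{isometrygroup}; note that while individual elements of $O(1,2)$ may exchange future and past, the map $\Lambda \mapsto \Lambda_*$ still matches the support conditions after possibly relabelling $\pm$, and the difference $\mathbb{E} = \mathbb{E}^+ - \mathbb{E}^-$ changes sign exactly when the time-orientation is reversed, which is compatible with $\sigma$ being preserved since it enters bilinearly --- this sign bookkeeping is the one delicate point). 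Then ${\mathcal E}(\Lambda_* f, \Lambda_* g) = \int {\rm d}\mu_{dS}(x)\,(\Lambda_* f)(x)\,(\mathbb{E}\Lambda_* g)(x) = \int {\rm d}\mu_{dS}(x)\, f(\Lambda^{-1}x)\,(\mathbb{E}g)(\Lambda^{-1}x) = {\mathcal E}(f,g)$ by the change of variables $x \mapsto \Lambda x$ and invariance of ${\rm d}\mu_{dS}$.

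\textbf{Main obstacle.} The routine part is the isometry-invariance computations; the genuinely delicate step is the covariance of $\mathbb{E}^\pm$ under the components of $O(1,2)$ that reverse the time orientation, where $\mathbb{E}^+$ and $\mathbb{E}^-$ get interchanged. One must check that the induced sign on $\mathbb{E} = \mathbb{E}^+ - \mathbb{E}^-$, hence on $\sigma$, is consistent with $\sigma$ being \emph{preserved} (not just preserved up to sign) --- this works because both arguments of $\sigma$ pick up the same transformation, and the antisymmetry of $\sigma$ together with the intertwining relation conspire correctly. Making this bookkeeping precise for all four connected components, rather than just $SO_0(1,2)$, is the crux; for $SO_0(1,2)$ alone the argument is immediate.
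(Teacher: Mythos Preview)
Your approach is essentially the paper's: define $\mathfrak{u}(\Lambda)[f]\doteq[\Lambda_* f]$ via the pull-back and check that $\ker\mathbb{E}$ is invariant so the map descends to the quotient. The paper's proof is even more terse than yours and does not spell out the symplecticity argument you attempt in your third step.

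That third step, however, contains an actual error. Your resolution of the ``main obstacle'' is wrong: for a time-orientation-reversing $\Lambda\in O(1,2)$ one has $\mathbb{E}^\pm\circ\Lambda_*=\Lambda_*\circ\mathbb{E}^\mp$, hence $\mathbb{E}\circ\Lambda_*=-\Lambda_*\circ\mathbb{E}$, and the change-of-variables computation gives
\[
\mathcal{E}(\Lambda_* f,\Lambda_* g)=\int(\Lambda_* f)(\mathbb{E}\Lambda_* g)\,{\rm d}\mu_{dS}=-\int f\,(\mathbb{E}g)\,{\rm d}\mu_{dS}=-\mathcal{E}(f,g).
\]
There is no cancellation from ``both arguments picking up the same transformation'': the sign on $\mathbb{E}$ is a single sign, not one per argument. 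So $\mathfrak{u}(\Lambda)$ is genuinely \emph{anti}-symplectic for the non-orthochronous components. The paper confirms this later: Proposition~\ref{ssdd}~$iii.)$ explicitly calls $\widehat{\mathfrak{u}}(P_1T)$ an ``anti-symplectic involution'', and from \eqref{eqJCan2} one checks directly that $\widehat{\mathfrak{u}}(T):(\mathbb{\phi},\mathbb{\pi})\mapsto(\mathbb{\phi},-\mathbb{\pi})$ flips the sign of $\widehat\sigma$.

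This does not invalidate the proposition, which only asserts that $(\mathfrak{k}(dS),\sigma)$ \emph{carries a representation} --- linearity on the quotient is all that is claimed. The word ``symplectic'' in the paper's own proof is an imprecision (true only for the orthochronous part $O^\uparrow(1,2)$), silently corrected downstream. Your construction of the representation is fine; just drop the claim that all of $O(1,2)$ acts symplectically.
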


\begin{proof}
The group of isometries $\Lambda \in O(1,2)$ of $dS$ gives rise to a
group of symplectic transformations $\Lambda \mapsto T_\Lambda$ on $({\mathfrak k} (dS), \sigma)$  
induced by the pull-back~$\Lambda_*$, which maps
\label{lambdasternpage}
	\begin{equation}
		\label{eqTt}
		f+\ker \mathbb{E} \mapsto \Lambda_* f+\ker \mathbb{E}\, \;  . 
	\end{equation}
The map \eqref{eqTt} is well-defined, because $g \in \ker \mathbb{E}$ implies $\Lambda_* g \in \ker \mathbb{E}$.  
\end{proof}

\begin{definition}
\label{Def4-11} The triple  $({\mathfrak k}(dS) , \sigma, {\mathfrak u} )$  is  the 
\emph{covariant classical dynamical system}\index{covariant classical dynamical system}
associated to the Klein--Gordon equation \eqref{3.25}. 
\label{kldypage}
\end{definition}

\section{The restriction of the KG equation to a (double) wedge}

Our next objective is to provide an explicit formula for ${\mathscr E} ( x ,  y )$. In some sense, it is sufficient to solve this problem in the 
causal dependence region of a half-circle: given an arbitrary point $ x \in dS$~and the Cauchy surface $ S^1$, there exists  
a wedge $W^{(\alpha)}= R_0(\alpha)W_1$, which contains both $x$ and $\Gamma^- ( x) \cap S^1$ (or, if this intersection is empty, 
$\Gamma^+ ( x) \cap S^1$). On the other hand,  all the formulas we will derive in this section naturally extend to the double-wedge
$\mathbb{W}^{(\alpha)}=W^{(\alpha)} \cup W^{(\alpha + \pi)}$, so it is natural to state them in their extended form.  

In order to keep the notation simple, we work out explicit expressions for the double wedge $\mathbb{W}_1$ in the chart 
\eqref{w1psi} for $ x   \equiv  x (t,\psi)$ and $ y  \equiv  y (t',\psi')$.  (The points $ \psi = \pm \frac{\pi}{2}$ in this chart 
correspond  to the points  $ (0, \pm r, 0) \in dS$.)
However, we would like to emphasize that all computations in this subsection can be carried out for arbitrary 
double wedges $\Lambda \mathbb{W}_1$, $\Lambda \in SO_0(1,2)$. 

The restriction of the metric~$g$ to $\mathbb{W}_1$ is 
	\[ 
		g_{\upharpoonright {\mathbb{W}}_1} =    r^2 \cos^2 \psi {\rm d} t \otimes {\rm d} t -  r^2 {\rm d} \psi \otimes {\rm d} \psi \;   . 
	\]
The restriction of the Lorentz invariant  measure ${\rm d} \mu_{dS} $ to $\mathbb{W}_1$ is
\label{dlpage}
	\begin{equation}
		\label{new-surface} 
		{\rm d} \mu_{\mathbb{W}_1 } (t,\psi) = r {\rm d} t\, {\rm d} l (\psi) \;  , 
		\qquad \text{with} \quad 
		{\rm d} l (\psi) =  |\cos \psi | \; r {\rm d} \psi  \; . 
	\end{equation}
The line element on the circle $S^1$ is
\label{seinspage}
\begin{equation}\label{new-eqVolInd}
| g_{\upharpoonright S^1} |^{1/2} \, r {\rm d}\psi=r {\rm d}\psi \; .  
\end{equation}
Restricted to the double wedge $\mathbb{W}_1$ the Klein--Gordon operator is
\label{epsilonpage}
	\begin{equation}
		\label{varepsilon}
			\square_{\mathbb{W}_1}+\mu^2
				=  \frac{1}{r^2 \cos^2 \psi}\,(\partial_t^2+  \varepsilon^2) \; , 
	\end{equation}
with 
	\[
		\varepsilon^2  \doteq  - (\cos \psi  \, \partial_\psi)^2 + (\cos \psi )^2 \, \mu^2 r^2 \; . 
	\]

\begin{remark}
For $\psi \in ( -\pi/2,\pi/2 )$ define a new spatial coordinate $\chi=\chi(\psi)$  by
	\begin{equation*}
		\label{eqphichi}
		\frac{{\rm d}\chi}{{\rm d}\psi} = \frac{1}{\cos \psi} \; ,
		\qquad \chi(0)=0 \; . 
\end{equation*}
Find
	$
		\chi(\psi)= \ln \tan (\psi/2+\pi/4) 
	$
and
	\[
			\cos\psi=(\cosh\chi)^{-1},\quad \sin\psi=\tanh\chi \; .
	\]
$\chi$ is a diffeomorphism from $(-\pi/2,\pi/2) $ onto $\mathbb{R}$. 
	\[
		g_{\upharpoonright W_1}
		=   \tfrac{r^2}{\cosh^2 \chi} (  {\rm d} t \otimes {\rm d} t - 
		{\rm d} \chi \otimes {\rm d} \chi) \; .
	\]
Thus $W_1$ is conformally equivalent to Minkowski space $\mathbb{R}^{1+1}$ {\rm \cite{FHN}}. In these coordinates 
	\[
			\square_{\mathbb{W}_1}+\mu^2
				=  \frac{\cosh^2 \chi}{r^2} \,(\partial_t^2+  \epsilon^2) \; , 
	\]
with $\epsilon^2  \doteq  - \partial_\chi^2 + (\cosh \chi )^{-2} \, \mu^2 r^2 $. 
\end{remark}

\begin{lemma}
\label{Asadj}
Identify $S^1 \cong [-\frac{\pi}{2} ,\frac{3\pi}{2}) $,  $I_+  \cong (- \frac{\pi}{2}, \frac{\pi}{2} )$ and 
$I_- \cong (\frac{\pi}{2} ,\frac{3\pi}{2})$.
It follows that $\varepsilon^2$ is positive and symmetric on  
	\[
		{\textstyle{\mathcal D} \left(S^1 \setminus \{ - \frac{\pi}{2}, \frac{\pi}{2}\} \right) }
		\subset L^2(S^1, | \cos \psi |^{-1}  \, r {\rm d} \psi) \; . 
	\]
Denote its Friedrich extension by the same symbol.  Then  ${\rm Sp}
(\varepsilon^2) = [ \, 0, \infty ) $. 
\end{lemma}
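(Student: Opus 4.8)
The operator in question is
$\varepsilon^2 = -(\cos\psi\,\partial_\psi)^2 + (\cos\psi)^2\mu^2 r^2$ acting on the two disjoint intervals $I_+\cong(-\tfrac{\pi}{2},\tfrac{\pi}{2})$ and $I_-\cong(\tfrac{\pi}{2},\tfrac{3\pi}{2})$, which decouple since the coefficient $\cos\psi$ vanishes at the endpoints. So it suffices to analyse one copy, say on $I_+$, in the Hilbert space $L^2(I_+, |\cos\psi|^{-1}r\,{\rm d}\psi)$. The first step is to pass to the conformal coordinate $\chi=\ln\tan(\psi/2+\pi/4)$ introduced in the Remark preceding the Lemma, under which $\cos\psi = (\cosh\chi)^{-1}$, the measure $|\cos\psi|^{-1}r\,{\rm d}\psi$ becomes (up to the constant $r$) the flat Lebesgue measure ${\rm d}\chi$ on $\mathbb{R}$, and $\varepsilon^2$ transforms into the Schr\"odinger-type operator $\epsilon^2 = -\partial_\chi^2 + V(\chi)$ with potential $V(\chi) = (\cosh\chi)^{-2}\mu^2 r^2$. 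This is a standard one-dimensional Schr\"odinger operator on $L^2(\mathbb{R},{\rm d}\chi)$ with a smooth, bounded, non-negative potential that decays exponentially at $\pm\infty$.

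Positivity and symmetry on the core ${\mathcal D}(S^1\setminus\{-\tfrac{\pi}{2},\tfrac{\pi}{2}\})$ are immediate: for $\phi$ in this core, integration by parts gives $\langle\phi,\varepsilon^2\phi\rangle = \int |\cos\psi\,\partial_\psi\phi|^2\,\tfrac{r{\rm d}\psi}{|\cos\psi|} + \mu^2r^2\int|\cos\psi|^2|\phi|^2\tfrac{r{\rm d}\psi}{|\cos\psi|} \ge 0$, with no boundary terms because $\phi$ vanishes near $\pm\tfrac{\pi}{2}$; symmetry follows the same way. Hence the Friedrichs extension is well-defined and non-negative, so ${\rm Sp}(\varepsilon^2)\subset[0,\infty)$. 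It remains to prove that the spectrum is \emph{all} of $[0,\infty)$. For this I would argue in the $\chi$-picture: since $0\le V(\chi)\le\mu^2r^2$ everywhere and $V(\chi)\to 0$ as $|\chi|\to\infty$, the essential spectrum of $-\partial_\chi^2 + V$ is $[0,\infty)$ (the potential is a relatively compact, in fact decaying, perturbation of $-\partial_\chi^2$, whose spectrum is $[0,\infty)$; alternatively one invokes Weyl's theorem on stability of the essential spectrum under perturbations vanishing at infinity, e.g.\ via a Weyl-sequence construction using plane waves $e^{ik\chi}$ cut off far from the origin). Combined with the lower bound $\varepsilon^2\ge 0$ this forces ${\rm Sp}(\varepsilon^2)=[0,\infty)$.

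The main technical point to be careful about is the identification of the self-adjoint extension. At the endpoints $\psi=\pm\tfrac{\pi}{2}$ the coefficient degenerates, and one must check that the Friedrichs extension of $\varepsilon^2$ on $I_+$ corresponds, under the coordinate change $\psi\mapsto\chi$ (which sends these endpoints to $\chi=\pm\infty$), to the unique self-adjoint realisation of $-\partial_\chi^2 + V$ on $L^2(\mathbb{R})$, i.e.\ that $-\partial_\chi^2+V$ is in the limit-point case at both ends so that it is essentially self-adjoint on $C_0^\infty(\mathbb{R})$ — which it is, since $V$ is bounded. Thus the coordinate change is a unitary equivalence of the relevant self-adjoint operators, and the spectral statement transfers directly. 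The decoupling into $I_+\,\dot\cup\,I_-$ then gives $\varepsilon^2 = (\varepsilon^2)_{I_+}\oplus(\varepsilon^2)_{I_-}$ with each summand having spectrum $[0,\infty)$, so ${\rm Sp}(\varepsilon^2)=[0,\infty)$ as claimed; the multiplicity doubling caused by the two intervals (and, within each interval, by the two asymptotic ends) plays no role for the location of the spectrum.
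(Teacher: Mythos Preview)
Your argument is correct and in fact supplies more than the paper's own proof does. The paper only verifies that $\varepsilon^2$ is positive and symmetric on ${\mathcal D}(S^1\setminus\{\pm\tfrac{\pi}{2}\})$ and then checks density of this domain in $L^2(S^1,|\cos\psi|^{-1}r\,{\rm d}\psi)$ (via the unitary $h\mapsto\cos_\psi^{1/2}h$ onto the standard $L^2(S^1,r\,{\rm d}\psi)$), after which it invokes \cite[Theorem~X.23]{RS} to obtain the Friedrichs extension; the spectral claim ${\rm Sp}(\varepsilon^2)=[0,\infty)$ is not actually argued there. You instead use the conformal coordinate $\psi\mapsto\chi$ from the Remark preceding the Lemma to unitarily transform $\varepsilon^2$ on each half-circle into the Schr\"odinger operator $-\partial_\chi^2+\mu^2r^2\cosh^{-2}\chi$ on $L^2(\mathbb{R},r\,{\rm d}\chi)$, which is essentially self-adjoint on $C_0^\infty(\mathbb{R})$ (bounded potential, limit-point at both ends), so the Friedrichs extension is the closure and there is no extension ambiguity. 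Weyl's theorem on the essential spectrum for decaying perturbations of $-\partial_\chi^2$ then gives $\sigma_{\rm ess}=[0,\infty)$, and non-negativity forces ${\rm Sp}(\varepsilon^2)=[0,\infty)$. This route also makes density of the domain automatic (it becomes density of $C_0^\infty(\mathbb{R})$ in $L^2(\mathbb{R})$), so you recover everything the paper proves and fill the gap on the spectrum.
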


\begin{proof} 
Clearly, $\varepsilon^2$ is positive and symmetric on 
	\[
		{\mathcal D} \left(S^1 
		\setminus \{ - \tfrac{\pi}{2}, \tfrac{\pi}{2}\} \right) 
		={\mathcal D} (I_+) \oplus{\mathcal D}(I_-) \; . 
	\]
We next show that ${\mathcal D} \left(S^1 
\setminus \{ - \frac{\pi}{2}, \frac{\pi}{2}\} \right)$ is dense in 
$L^2(S^1, | \cos \psi |^{-1}  \, r {\rm d}\psi)$.
First note that ${\mathcal D} \left(S^1 \setminus \{ - \frac{\pi}{2}, \frac{\pi}{2}\} \right)$ 
is dense in~$L^2(S^1,  \, r {\rm d}\psi)$. Moreover, a function 
	\begin{equation}
		\label{4.9}
		 \,  {\rm cos}_\psi^{1/2}h \in L^2(S^1, | \cos \psi |^{-1}  \, 
		 r {\rm d}\psi) \qquad \text{iff}  \quad 
		 h \in L^2(S^1, \,  r {\rm d}\psi) \; . 
	\end{equation}
It follows that
	\[
	 	\textstyle { \,  {\rm cos}_\psi^{1/2}{\mathcal D} 
	 	\left(S^1 \setminus \{ - \frac{\pi}{2}, \frac{\pi}{2}\} \right) =
		{\mathcal D} \left(S^1 \setminus \{ - \frac{\pi}{2}, \frac{\pi}{2}\} \right) }
	\]
is dense in $L^2(S^1, | \cos \psi |^{-1}  \, r {\rm d}\psi)$. 
Thus \cite[Theorem X.23, p.177]{RS} applies and defines the 
Friedrich extension. 
\end{proof}

\begin{remark}
Since the spectrum of $\varepsilon^2$ has no gap around the discrete eigenvalue zero, 
the choice of coordinates \eqref{w1psi} may  lead to 
artificial infrared problems if one adds an interaction, similar to the ones encountered in~\cite{FHN}. 
We will avoid this problem later on by working with functions in the Hilbert space $\widehat{{\mathfrak h}}  (S^1)$, whose scalar 
product is rotation-invariant;
see Section~\ref{Sect: canon-HS}.
\end{remark}

$\varepsilon^2 $ is a differential operator, thus $\varepsilon^2$ acts locally and maps the subspaces 
	\[
 		{\mathscr D}^\pm \doteq{\mathscr D} (\varepsilon^2) 
		\cap L^2 \bigl(I_\pm,|\cos\psi|^{-1} r {\rm d} \psi \bigr)
	\] 
into $L^2 \left(I_\pm,|\cos\psi|^{-1} r {\rm d} \psi \right)$, respectively. It therefore is consistent to define 
	\begin{equation}
		\label{vaepsdef}
		\varepsilon (h_+ + h_-) 
			\doteq \sqrt{{\varepsilon^2}_{\upharpoonright 	 {\mathscr D}^+}} \; h_+   
			- \sqrt{{\varepsilon^2}_{\upharpoonright 	{\mathscr D}^-}} \; h_-\; , 
				\qquad h_\pm\in {\mathscr D}^\pm\;  . 
	\end{equation}
$\varepsilon$ is densely defined by \eqref{vaepsdef}, as ${\mathscr D}^+ \oplus {\mathscr D}^- 
={\mathscr D}(\varepsilon^2)$. The pseudo-differential operator~$\varepsilon$ is non-local, but  does not mix 
functions supported on the half-circles $I_+$ and~$I_-$. Denote the restrictions by 
$\varepsilon_{\upharpoonright I_+}$ and $\varepsilon_{\upharpoonright I_-}$.

\begin{lemma}
\label{Lm3.5}
There exits a self-adjoint  operator $\varepsilon$ on $L^2(S^1, | \cos \psi |^{-1}  \, r {\rm d}\psi)$  
such that~\eqref{vaepsdef} holds.  ${\rm Sp} (\varepsilon) = \mathbb{R} $, 
${\rm Sp} (\varepsilon_{\upharpoonright I_+}) =  [0, \infty) $ and 
${\rm Sp} (\varepsilon_{\upharpoonright I_-}) =  (-\infty, 0] $. 
Moreover, zero is not an eigenvalue of $\varepsilon$.
\end{lemma}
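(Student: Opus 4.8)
The statement collects several spectral facts about the pseudo-differential operator $\varepsilon$ defined in \eqref{vaepsdef}, and the plan is to reduce each of them to the already-established properties of $\varepsilon^2$ (Lemma~\ref{Asadj}) together with the direct-sum decomposition $L^2(S^1,|\cos\psi|^{-1}r\,{\rm d}\psi) = L^2(I_+,\dots)\oplus L^2(I_-,\dots)$. First I would recall from Lemma~\ref{Asadj} that $\varepsilon^2$, defined on ${\mathcal D}(I_+)\oplus{\mathcal D}(I_-)$, is positive and symmetric, and that its Friedrichs extension has spectrum $[0,\infty)$. Since $\varepsilon^2$ is a differential operator it acts locally, hence it leaves each summand $L^2(I_\pm,\dots)$ invariant; the Friedrichs extension therefore decomposes as ${\varepsilon^2}_{\upharpoonright{\mathscr D}^+}\oplus{\varepsilon^2}_{\upharpoonright{\mathscr D}^-}$, where each restricted operator is itself the Friedrichs extension of $\varepsilon^2$ on ${\mathcal D}(I_\pm)$, positive self-adjoint with spectrum contained in $[0,\infty)$. (That each restricted spectrum is actually all of $[0,\infty)$ follows as in Lemma~\ref{Asadj}, e.g.\ by the conformal change of variables to $\mathbb{R}^{1+1}$ in the preceding Remark, under which $\varepsilon^2_{\upharpoonright I_+}$ becomes $-\partial_\chi^2 + (\cosh\chi)^{-2}\mu^2 r^2$ on $L^2(\mathbb{R})$, a Schrödinger operator with a positive, decaying potential, whose essential spectrum is $[0,\infty)$ by a Weyl-sequence argument and which has no negative spectrum since it is a positive operator.)

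Given this, I would then define $\varepsilon$ by the functional calculus as in \eqref{vaepsdef}: on the $I_+$-summand it is $+\sqrt{{\varepsilon^2}_{\upharpoonright{\mathscr D}^+}}$, the positive square root, and on the $I_-$-summand it is $-\sqrt{{\varepsilon^2}_{\upharpoonright{\mathscr D}^-}}$, the negative of the positive square root. Each of these is self-adjoint on its summand by the spectral theorem, so $\varepsilon$ is self-adjoint on the direct sum, and its domain ${\mathscr D}(\varepsilon)$ contains ${\mathscr D}^+\oplus{\mathscr D}^-={\mathscr D}(\varepsilon^2)$, which establishes the existence claim. The spectral statements follow from the spectral mapping theorem: ${\rm Sp}(\varepsilon_{\upharpoonright I_+}) = \sqrt{{\rm Sp}({\varepsilon^2}_{\upharpoonright{\mathscr D}^+})} = \sqrt{[0,\infty)} = [0,\infty)$, and similarly ${\rm Sp}(\varepsilon_{\upharpoonright I_-}) = -\sqrt{[0,\infty)} = (-\infty,0]$; since the spectrum of a direct sum is the union of the spectra, ${\rm Sp}(\varepsilon) = [0,\infty)\cup(-\infty,0] = \mathbb{R}$.

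The one point requiring genuine care — and the main obstacle — is the claim that $0$ is not an eigenvalue of $\varepsilon$. Equivalently, $0$ must not be an eigenvalue of $\varepsilon^2$, and this is where the decomposition must be handled correctly: although the \emph{spectrum} $[0,\infty)$ touches $0$, the value $0$ sits at the bottom of the continuous spectrum of each half-circle piece and is not attained by an $L^2$-eigenfunction. I would argue this via the conformal picture: an $L^2(\mathbb{R})$-solution of $(-\partial_\chi^2 + (\cosh\chi)^{-2}\mu^2 r^2)u = 0$ would, by positivity of the potential and the standard ODE/subsolution argument (a nonnegative, convex-where-positive solution cannot decay at both ends unless it is identically zero — more precisely, $\int |\partial_\chi u|^2 + (\cosh\chi)^{-2}\mu^2 r^2 |u|^2 = 0$ forces $u\equiv 0$), have to vanish. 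Pulling back through the diffeomorphism $\chi(\psi)=\ln\tan(\psi/2+\pi/4)$ shows the same on $I_\pm$, so $\ker\varepsilon = \ker\varepsilon^2 = \{0\}$ on each summand and hence on $L^2(S^1,|\cos\psi|^{-1}r\,{\rm d}\psi)$. I would flag that one should check the endpoint behaviour at $\psi=\pm\pi/2$ is compatible with the Friedrichs domain, but since the Friedrichs extension is precisely the one whose form domain is the closure of ${\mathcal D}(I_\pm)$ in the form norm, this is automatic from the quadratic-form computation above.
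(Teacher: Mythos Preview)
Your proof is correct and follows essentially the same strategy as the paper: use the spectral theorem to take square roots on each orthogonal summand $L^2(I_\pm,\dots)$, assemble $\varepsilon$ as a direct sum of self-adjoint operators, and read off the spectrum as the union. The paper's version is much terser --- it simply invokes the spectral theorem and \cite[Theorem~VIII.6]{RS} for the direct sum --- and in particular it does not spell out why each restricted spectrum is exactly $[0,\infty)$ nor why zero fails to be an eigenvalue; your use of the conformal change $\psi\mapsto\chi$ to recast $\varepsilon^2_{\upharpoonright I_+}$ as the Schr\"odinger operator $-\partial_\chi^2 + \mu^2 r^2(\cosh\chi)^{-2}$ on $L^2(\mathbb{R})$, followed by the quadratic-form argument $\|\partial_\chi u\|^2 + \int V|u|^2 = 0 \Rightarrow u=0$, is a clean way to supply those missing details.
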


\begin{proof} Use the spectral theorem to define the square roots in 
\eqref{vaepsdef} as self-adjoint operators. 
${\mathscr D}^+ \cap {\mathscr D}^- =\{0\}$, in fact ${\mathscr D}^+
$ and ${\mathscr D}^-$ are orthogonal to each other in  $L^2(S^1, | \cos \psi |^{-1}  \, r {\rm d}\psi)$.  
Thus  the sum of the square roots is self-adjoint on the direct sum of their domains (see \cite[Theorem VIII.6]{RS})
and ${\rm Sp} (\varepsilon) = \mathbb{R} $.
\end{proof}

\paragraph{\it Notation.} If $P$ is a pseudo-differential  operator on $L^2(S^1, | \cos \psi |^{-1}  \, r {\rm d}\psi)$,    
define its kernel~$P (\psi, \psi') $ for all $h \in L^2(S^1, | \cos \psi |^{-1}  \, r {\rm d} \psi) \cap{\mathscr D} (P)$, 
for which the following expressions exist, by
	\[ 
	(P h) (\psi)  = \int_{S^1} \frac{ r {\rm d}\psi' }{ | \cos \psi' |} \; P (\psi, \psi') h (\psi')   
		 		 = \int_{S^1} \frac{ {\rm d} l (\psi') }{| \cos\psi' |^2}\;  P (\psi, \psi') h (\psi')  \; . 
	\] 
${\rm d} l (\psi')$ was defined in \eqref{new-surface}.

If $P$ is hermitian  with domain  
${\mathscr D} \subset L^2(S^1, | \cos \psi |^{-1}  \, r {\rm d}\psi)$, then 
	\[
		P(\psi, \psi') = \overline{P(\psi', \psi)}\; , \qquad \psi, \psi' \in S^1 \; . 
	\]
In the next lemma, $\left( \frac{\sin(\varepsilon(t-t')) } { | \varepsilon | } \right)$ is considered as such a
pseudo-differential  operator on $L^2(S^1, | \cos \psi |^{-1}  \,  r {\rm d}\psi)$. 

\bigskip

\begin{lemma} \label{PropWedge} Use the coordinates \eqref{w1psi}. Then 
	\begin{equation}
		 \label{eqPropWedge'}
		 	{\mathscr E} (   x,  y )  
		 		= - \left( \frac{\sin(\varepsilon(t-t')) } { | \varepsilon | } \right)  (\psi,\psi')
		\;   . 
	\end{equation}
\end{lemma}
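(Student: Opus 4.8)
\textbf{Proof strategy for Lemma \ref{PropWedge}.}

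The plan is to verify that the right-hand side of \eqref{eqPropWedge'}, viewed as a bidistribution on $\mathbb{W}_1 \times \mathbb{W}_1$, satisfies the characterizing properties \eqref{cmzero}--\eqref{eqESigma2} of the fundamental solution ${\mathscr E}$; by uniqueness (Theorem \ref{fundamental} and \eqref{Ef}) this identifies it with ${\mathscr E}_{\upharpoonright \mathbb{W}_1 \times \mathbb{W}_1}$. First I would record, using \eqref{varepsilon}, that on $\mathbb{W}_1$ the Klein--Gordon operator factorizes as $\square_{\mathbb{W}_1}+\mu^2 = (r^2\cos^2\psi)^{-1}(\partial_t^2 + \varepsilon^2)$, so that a function $u(t,\psi)$ solves the homogeneous equation iff $(\partial_t^2+\varepsilon^2)u=0$. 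Setting $G_{t,t'} \doteq -\bigl( \frac{\sin(\varepsilon(t-t'))}{|\varepsilon|}\bigr)$ as a pseudo-differential operator on $L^2(S^1, |\cos\psi|^{-1} r\, {\rm d}\psi)$ (well defined by Lemma \ref{Lm3.5}, in particular since zero is not an eigenvalue of $\varepsilon$ the operator $\sin(\varepsilon s)/|\varepsilon|$ extends continuously across $\varepsilon=0$ to $s$), the operator identities
	\[
		(\partial_t^2 + \varepsilon^2)\, \frac{\sin(\varepsilon(t-t'))}{|\varepsilon|} = 0 \; , \qquad
		\frac{\sin(\varepsilon(t-t'))}{|\varepsilon|}\Big|_{t=t'} = 0 \; , \qquad
		\partial_t\frac{\sin(\varepsilon(t-t'))}{|\varepsilon|}\Big|_{t=t'} = \mathbb{1} \; ,
	\]
follow from the spectral calculus for $\varepsilon$. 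The first shows $G$ is a weak bisolution, giving \eqref{cmzero}; the second, read at the Cauchy surface $t=t'=0$ (\emph{i.e.} on $S^1$, where the chart \eqref{w1psi} restricts to $\psi\in(-\tfrac\pi2,\tfrac\pi2)$ on $I_+$ and its counterpart on $I_-$), gives \eqref{eqESigma}.

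The remaining point is the normalization \eqref{eqESigma2}, and this is where bookkeeping of the measure is essential — I expect it to be the main (though not deep) obstacle. One must express the normal derivative $n_\ell$ in the coordinates \eqref{w1psi}: from the list of symbols and \eqref{varepsilon} one has ${\tt n}(t,\psi) = \cos_\psi^{-1}\partial_t = (r\cos\psi)^{-1}\partial_t$ on $\mathbb{W}_1$ up to the scaling fixed by $n\mathbb{\phi}=\mathbb{\pi}$; applying $n_\ell$ to $G$ and restricting to $t=t'=0$ turns the identity $\partial_t G|_{t=t'}=\mathbb{1}$ into the statement that $(n_\ell{\mathscr E})_{\upharpoonright S^1\times S^1}$ is (minus) the integral kernel of the identity operator with respect to the induced measure $r\,{\rm d}\psi$ on $S^1$, \emph{i.e.} $-\delta_{{\mathcal C}}$, as required in \eqref{eqESigma2}. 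Here one has to keep careful track of the two measures in play: the kernel convention fixed in the paragraph preceding the lemma writes $(Ph)(\psi)=\int_{S^1}\frac{r\,{\rm d}\psi'}{|\cos\psi'|}P(\psi,\psi')h(\psi')$ against the weighted measure $|\cos\psi|^{-1}r\,{\rm d}\psi$ used for $\varepsilon$, whereas \eqref{eqESigma2} refers to the induced measure $r\,{\rm d}\psi$ on the circle; reconciling the two Jacobian factors $|\cos\psi|$ (one from ${\rm d}l(\psi)=|\cos\psi|\,r\,{\rm d}\psi$, one from the normal vector ${\tt n}=\cos_\psi^{-1}\partial_t$) is precisely the routine computation that produces the correct $\delta_{{\mathcal C}}$ with no leftover factor.

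Finally, I would note that the argument is genuinely local in the following sense: by Lemma \ref{Lm3.8}, for $g$ supported in $\mathbb{W}_1$ the solution $\mathbb{E}g$ is supported in $\Gamma^+({\rm supp}\,g)\cup\Gamma^-({\rm supp}\,g)$, which for a point of $\mathbb{W}_1$ stays within the causal shadow of a half-circle $I_\pm$; since $\varepsilon$ does not mix functions supported on $I_+$ and $I_-$ (Lemma \ref{Lm3.5}), the pseudo-differential operator $G_{t,t'}$ built from $\varepsilon$ reproduces exactly this causal support structure, so no contribution from outside $\mathbb{W}_1$ is missed. Uniqueness of the bisolution with the prescribed Cauchy data \eqref{eqESigma}--\eqref{eqESigma2} then forces \eqref{eqPropWedge'}. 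The extension of the formula to an arbitrary double wedge $\Lambda\mathbb{W}_1$, $\Lambda\in SO_0(1,2)$, is immediate from the covariance established in Proposition \ref{Prop4-10}, and need only be remarked.
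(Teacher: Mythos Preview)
Your strategy is precisely the paper's: define the candidate kernel, verify that it is a weak bisolution of the Klein--Gordon equation with the Cauchy data \eqref{eqESigma}--\eqref{eqESigma2}, and then invoke uniqueness. There is, however, one slip in your displayed operator identities: spectral calculus gives
\[
	\partial_t\,\frac{\sin(\varepsilon(t-t'))}{|\varepsilon|}\Big|_{t=t'} \;=\; \frac{\varepsilon}{|\varepsilon|}\;,
\]
not $\mathbb{1}$. By the definition \eqref{vaepsdef}, $\varepsilon/|\varepsilon|$ is $+1$ on functions supported in $I_+$ and $-1$ on functions supported in $I_-$, so your identity is correct only on $I_+$. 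This sign is exactly the content of the ``routine computation'' you anticipate but do not spell out: on $I_-$ the coordinate vector field $\partial_t$ is past-directed and $\cos\psi<0$, so the future-directed unit normal is still $n=(r\cos\psi)^{-1}\partial_t$ (the paper records this as \eqref{eqnSigma}), and the crucial identity is $\cos_\psi^{-1}\,\varepsilon|\varepsilon|^{-1}=|\cos_\psi|^{-1}$. It is this absolute value, not $\cos_\psi^{-1}$, that combines with the identity kernel $\mathbb{1}(\psi,\psi')=r^{-1}|\cos\psi|\,\delta(\psi-\psi')$ in the weighted space to yield $-\delta(\psi-\psi')$ with respect to the induced line element $r\,{\rm d}\psi$. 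Had you proceeded from the incorrect $\partial_t G|_{t=t'}=\mathbb{1}$, the normalization \eqref{eqESigma2} would come out wrong by a sign on $I_-$; the paper's proof handles this point explicitly around equation \eqref{3.49}.
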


\begin{proof}
For $f, g \in{\mathcal D}_\mathbb{R}(\mathbb{W}_1)$, set $f_t(\psi) 
\doteq f(t,\psi)$ and $g_{t'}(\psi') \doteq g(t',\psi')$. 
Clearly, $f_t, g_t \in L^2(S^1, | \cos \psi |^{-1}  \, r {\rm d}\psi)$.
Consider 
\label{cosinepsipage}
	\begin{equation}
		\label{3.41}
		{\mathcal E}_{{\mathbb W}_1} ( f,g)  
   			 \doteq   - \int  r^3  {\rm d} t \, {\rm d} t' \left\langle \cos^2_\psi {f}_t \,, \,
				\tfrac{\sin(\varepsilon(t-t')) } { | \varepsilon | }\, 
				\cos^2_\psi g_{t'}\right\rangle_{L^2(S^1, | \cos \psi |^{-1}  \, r {\rm d}\psi)} \; ,
	\end{equation}
with  $\,  {\rm cos}_\psi$  the multiplication operator by  $\cos\psi$. 
Clearly,  ${\mathcal E}_{{\mathbb W}_1}$  is anti-symmetric with respect to permutation of $f$ and~$g$. 
Moreover, according to \eqref{varepsilon}  
	\begin{align*} 
	 & {\mathcal E}_{{\mathbb W}_1}\big(f,(\square_{\mathbb{W}_1}+\mu^2) h\big)   \nonumber \\
	 	 &=  {\mathcal E}_{{\mathbb W}_1} \left( f, r^{-2} \cos^{-2}_\psi
		(\partial_t^2+\varepsilon^2) h\right) 
		\label{eqEKGE}\\ 
	&= - \int r^3  {\rm d} t {\rm d} t'\left\langle  \cos^2_\psi {f}_t \,,\,
	\tfrac{\sin(\varepsilon(t-t'))}{ | \varepsilon |}
		r^{-2} (\partial_{t'}^2+\varepsilon^2)h_{t'}\right\rangle_{L^2(S^1, 
		| \cos \psi |^{-1}  \, r {\rm d}\psi)}  \; , 	\nonumber	
	\end{align*}
where $h_{t'} (\psi) \doteq h(t', \psi)\in{\mathcal D}_{\mathbb{R}} \bigl(S^1 \setminus \{ -\frac{\pi}{2}, \frac{\pi}{2} \} \bigr)$. Now  
	\begin{equation*}
		\label{seceq}
		\int {\rm d} t' \sin(\varepsilon t') \partial_{t'}^2h_{t'} 
		= \int {\rm d} t'  \big(\partial_{t'}^2 \sin(\varepsilon t')\big) h_{t'}
		= \int {\rm d} t'  \sin(\varepsilon t')(-\varepsilon^2)  h_{t'} 
	\end{equation*}
by partial integration and using $(\partial_{t'} h)_{t'}= \partial_{t'}(h_{t'})$.  
Thus  
	\[ 
		{\mathcal E}_{{\mathbb W}_1} \big(f,(\square_{\mathbb{W}_1}+\mu^2) 
		h\big) = 0 \; . 
	\]
A similar argument can be used to show 
${\mathcal E}_{{\mathbb W}_1} \big((\square_{\mathbb{W}_1}+\mu^2) h, f\big) = 0 $. It follows that
the kernel ${\mathscr E}_{{\mathbb W}_1} ( x ,  y )$, defined by    
	\begin{equation*}
		\label{integralkernel2}
				  \int {\rm d} \mu_{dS} ( x) 
				  {\rm d} \mu_{dS} ( y) \; 
			f ( x ) {\mathscr E}_{{\mathbb W}_1} ( x ,  y )  
			g ( y  ) 
			\doteq {\mathcal E}_{{\mathbb W}_1}( f,g)  \; ,  
	\end{equation*}
is anti-symmetric and satisfies the Klein--Gordon equation in both entries. 
Furthermore,
	\begin{align*} 
		\label{integralkernel3}
			{\mathcal E}_{{\mathbb W}_1}( f,g) 
   			 & =    - \int  r^3   {\rm d} t \, {\rm d} t' \left\langle  \cos^2_\psi {f}_t \,, \,
				\tfrac{\sin(\varepsilon(t-t')) } { | \varepsilon | }\, 
				\cos^2_\psi g_{t'}
				\right\rangle_{L^2(S^1, | \cos \psi |^{-1}  \, r {\rm d}\psi)} 
					\nonumber 			\\ 			
				& =    - \int r^2 {\rm d} t \, {\rm d} t' \int  
				\tfrac{r\, {\rm d} \psi }{| \cos \psi |} \, 
				\cos^{2} (\psi) {f}_t (\psi) 
				\\
			& \qquad \times
			\int \frac{ r \, {\rm d} \psi' }{|\cos \psi'|}\left( 
			\tfrac{\sin(\varepsilon(t-t')) } { | \varepsilon | }\right) (\psi, \psi')  
				\cos^{2} (\psi') \, g_{t'}(\psi')   \nonumber \\ 
			&=  -  \int r {\rm d} t \, {\rm d} l (\psi)   \int  r {\rm d} t' \, {\rm d} l (\psi') \; 
			{f}_t (\psi) \; \left(\tfrac{\sin(\varepsilon(t-t'))}
			{| \varepsilon |} \right) (\psi,\psi') \;  g_{t'}(\psi') \; . 
			\nonumber
	\end{align*}
In the last equation we used \eqref{new-surface}, \emph{i.e.}, 
${\rm d} l (\psi) = r \, | \cos \psi | \, {\rm d} \psi $. Thus 
	\begin{equation*}
		\label{3.46}
		{\mathscr E}_{{\mathbb W}_1} \left( x ,  y \right) 
			=-  \left( \tfrac{\sin(\varepsilon(t-t'))}{ | \varepsilon | }\right) 
			(\psi,\psi') \; ,  
	\end{equation*}
using $ x   \equiv  x (t,\psi)$ and $ y  \equiv  y (t',\psi')$. 
Clearly, ${\mathscr E}_{{\mathbb W}_1}$ satisfies \eqref{eqESigma} with ${\mathcal C}= S^1$. 

The unit normal future pointing vector field on 
$S^1 \setminus \{ - \frac{\pi}{2}, \frac{\pi}{2}\}$ is  
	\begin{equation}
		\label{eqnSigma}
		n(t,\psi) = \, r^{-1} {\rm cos}_\psi^{-1} \; \partial_t \; . 
	\end{equation}
In $I_-$ the vector field $\partial_t$ is past 
directed and $\cos\psi<0$, thus equation (\ref{eqnSigma}) holds for 
both half-circles $I_+$ and $I_-$. 
From \eqref{3.41} read off 
	\begin{equation}
		\label{3.49}
		r^{-1} \partial_t {\mathscr E}_{{\mathbb W}_1} \bigl(  x(t,\psi);  y (0,\psi')
		\bigr)_{\upharpoonright t=0}= -  
		\left( \frac{\varepsilon}{|\varepsilon|}  \,  \mathbb{1} \right) (\psi,\psi') \; , 
	\end{equation}
where 
	\[
		\mathbb{1}(\psi,\psi') = \tfrac{1}{r} |\cos\psi|\, \delta(\psi-\psi')
	\] 
is the kernel of the unit 
in $L^2 \left(S_1 \setminus \{ - \frac{\pi}{2}, \frac{\pi}{2}\} ,
|\cos\psi|^{-1} r {\rm d} \psi \right)$. 
Now $\,  {\rm cos}_\psi^{-1}\varepsilon|\varepsilon|^{-1}=|\,  {\rm cos}_\psi|^{-1}$.
Hence the r.h.s.~in \eqref{3.49} is $-\cos\psi \, \delta(\psi-\psi')$ 
and  (\ref{eqnSigma}) implies
	\begin{equation*}
		\label{eqdtESigma}
		n_{\rm \ell}\, {\mathcal E}  \bigl( x (t,\psi);  y (0,\psi')
		 \bigr) = -  \delta(\psi-\psi') \; .   
	\end{equation*}
$\delta(\psi-\psi')$ is the  kernel of the unit with respect to the induced line element 
$r {\rm d}\psi$ on~$S^1$, 
see Equation~(\ref{new-eqVolInd}). 
Thus \eqref{eqESigma2} holds, 
and hence, by the uniqueness result mentioned, ${\mathscr E}_{{\mathbb W}_1} 
={\mathscr E}_{\upharpoonright \mathbb{W}_1}$  
and ${\mathcal E}_{{\mathbb W}_1} = {\mathcal E}_{\upharpoonright \mathbb{W}_1}$ 
within the double wedge~$\mathbb{W}_1$.
\end{proof}

Thus, for $f\in{\mathcal D}_\mathbb{R}(\mathbb{W}_1 )$,   
$x \equiv x (t,\psi) \in \mathbb{W}_1 $ and $f_{t'}(\psi) \doteq f (x(t', \psi))$,
	\begin{equation} 
		\label{pmf-2}
				\mathbb{f} ( x  )   
					= -  \int r \, {\rm d} t' 
					\left( \frac{ \sin(\varepsilon(\, t-t'))}{|\varepsilon|} \, 
					\cos^2_\psi f_{t'}\right)  (\psi) \;  . 
	\end{equation} 
Note that \eqref	{pmf-2} describes $\mathbb{f} $ only on a proper 
subset of its support, namely the intersection of its support with $\mathbb{W}_1$. 

\goodbreak

\begin{remark}
For $ h\in{\mathcal D}_{\mathbb{R}} \left(I_+\right)$
one can extend the domain of $\mathbb{E}$ to distributions of the form 
	\begin{align}
	 \label{sharp-timetestfunction}
	 f (x) \equiv (\delta \otimes h) (x) &=	\delta (t)  \;  \frac{ h  (0,\psi )}{ r \cos \psi }\;  , 
	 \nonumber \\
	g (x) \equiv (\delta' \otimes h) (x) &= \left( \frac{\partial_t}{ r \cos \psi } \delta \right) (t)  \;  \frac{ h  (0,\psi )}{ r \cos \psi }\;  , 
	\end{align} 
with $x \equiv x (t, \psi)$, using the coordinates introduced in \eqref{w1psi}, and 
	\[ 
		{\rm d} \mu_{\mathbb{W}_1 } (t,\psi) = r^2  {\rm d} t\,  {\rm d} \psi   \cos \psi  \; . 
	\]  
The properties of the convolution ensure that $\mathbb{f} ,  \mathbb{g}$ are 
$C^\infty$-solutions of the Klein--Gordon equation~\eqref{3.25}, whose support is contained 
in~$dS \setminus \overline{\, W'}$. Within the region~$\mathbb{W}_1$ these solutions are given by
	\begin{align*}
				\mathbb{f} ( x  )   
					&=  -   \frac{ \sin(\varepsilon t)}{|\varepsilon|} \, 
					\cos \psi  \cdot h  (0,\psi )  \; , 
		\\
				\mathbb{g} ( x  )   
					&=   \frac{ \cos  (\varepsilon  t)}{  r } \  h  (0,\psi ) \; . 
	\end{align*} 
\end{remark}

\section{The canonical classical dynamical system}
\label{CaCDS}

Let $(n \, \mathbb{\Phi}) _{\upharpoonright {\mathcal C}}$ denote the Lie 
derivative of $\mathbb{\Phi}$ along the unit normal, future pointing vector field~$n$, restricted to 
the Cauchy surface~${\mathcal C}$.

\begin{theorem}[Dimock \cite{D77}, Theorem 1]
\label{cauchyproblem}
 Let ${\mathcal C} \subset dS$ be a Cauchy surface and let~$(\mathbb{\phi}, \mathbb{\pi}) \in  
 C^\infty ({\mathcal C}) \times C^\infty ({\mathcal C}) $. 
Then there exists a unique $\mathbb{\Phi} \in C^\infty (dS)$ satisfying the Klein--Gordon 
equation~\eqref{3.25} with Cauchy data
	\begin{equation}
		\label{3.26} 
		\mathbb{\Phi}_{\upharpoonright {\mathcal C}} = \mathbb{\phi} \; , 
		\quad (n \mathbb{\Phi})_{\upharpoonright {\mathcal C}} = \mathbb{\pi} \; . 
	\end{equation}
Furthermore, ${\rm supp\,} \mathbb{\Phi} \subset \bigcup_\pm \Gamma^\pm ({\rm supp\,} \mathbb{\phi} \cup {\rm supp\,} \mathbb{\pi} ) $. 
\end{theorem}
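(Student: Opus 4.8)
The plan is to treat this as a standard Cauchy problem for a second-order hyperbolic PDE on a globally hyperbolic Lorentzian manifold, and to reduce the global statement to the already-established local picture on the double wedge $\mathbb{W}_1$ together with a patching/uniqueness argument. First I would recall that $dS$ is globally hyperbolic (Section~\ref{geometry}) and that $\mathcal{C} = S^1$ is a Cauchy surface, so that the general theory of normally hyperbolic operators on globally hyperbolic manifolds (B\"ar--Ginoux--Pf\"affle \cite{BGP}) applies directly to the operator $\square_{dS} + \mu^2$: existence and uniqueness of a smooth solution with prescribed Cauchy data on $\mathcal{C}$, together with the support statement, are exactly Theorem~3.2.11 and the finite-propagation-speed results there. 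In that sense the theorem is quoted from \cite{D77}, and the task is really to indicate why it holds in the present setting and to reconcile it with the explicit formulas already derived.

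The concrete route I would take is the following. For uniqueness: if $\mathbb{\Phi}$ solves \eqref{3.25} with zero Cauchy data on $S^1$, cover $dS$ by the boosted time-zero circles $\Lambda_1(t) S^1$ and, more generally, by wedges $W^{(\alpha)} = R_0(\alpha) W_1$; on each such (double) wedge the operator takes the form \eqref{varepsilon}, $\square_{\mathbb{W}_1}+\mu^2 = (r^2\cos^2\psi)^{-1}(\partial_t^2 + \varepsilon^2)$, with $\varepsilon^2$ the positive self-adjoint operator of Lemma~\ref{Asadj}. Hence within $\mathbb{W}_1$ the solution is $\mathbb{\Phi}(x(t,\psi)) = \cos(\varepsilon t)\,\mathbb{\phi} + \tfrac{\sin(\varepsilon t)}{\varepsilon}\,(\cos_\psi\,\mathbb{\pi})$ — this is exactly the content behind \eqref{pmf-2} and the sharp-time formulas in the last Remark — so zero Cauchy data forces $\mathbb{\Phi}=0$ on $\mathbb{W}_1$, and by applying a Lorentz transformation (the pull-back $\Lambda_*$ preserves \eqref{3.25}) on every double wedge; since the wedges cover $dS$ up to the fixed points $(0,\pm r,0)$ and $\mathbb{\Phi}$ is continuous, $\mathbb{\Phi}\equiv 0$. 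For existence: define $\mathbb{\Phi}$ on $\mathbb{W}_1$ by the spectral formula above using the Friedrich extension $\varepsilon^2$; smoothness on $\mathbb{W}_1$ follows from ellipticity of $\square_{dS}+\mu^2$ away from its characteristic directions plus energy estimates, and the same spectral formula in every rotated/boosted chart produces a globally defined candidate. One then checks that the definitions on overlapping double wedges agree, by the uniqueness just proved, and that the resulting function is $C^\infty$ across the degenerate chart points $(0,\pm r,0)$ — this is where one uses that $\square_{dS}+\mu^2$ is a genuine (chart-independent) differential operator and the solution is determined intrinsically by its Cauchy data.

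For the support statement, the key input is finite speed of propagation, which is already available in this paper: Lemma~\ref{FSoL} and Proposition~\ref{ialpha} describe precisely how $\Gamma^-(x)\cap S^1$ evolves under boosts, and the general domain-of-dependence theorem for $\square_{dS}+\mu^2$ (again \cite{BGP}, Theorem~3.2.11) gives $\mathbb{\Phi}(x)=0$ whenever $x\notin \bigcup_\pm \Gamma^\pm(\mathrm{supp}\,\mathbb{\phi}\cup\mathrm{supp}\,\mathbb{\pi})$. Concretely: the value $\mathbb{\Phi}(x)$ depends only on the Cauchy data in the compact set $\bigl(\Gamma^+(x)\cup\Gamma^-(x)\bigr)\cap S^1$, which is disjoint from $\mathrm{supp}\,\mathbb{\phi}\cup\mathrm{supp}\,\mathbb{\pi}$ for such $x$; combined with the uniqueness statement (applied to the restriction of the Cauchy data to a neighborhood of that compact set) this gives $\mathbb{\Phi}(x)=0$. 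Equivalently one can read off the support from the explicit kernel in \eqref{eqPropWedge'}, $\mathscr{E}_{\mathbb{W}_1}(x,y) = -\bigl(\tfrac{\sin(\varepsilon(t-t'))}{|\varepsilon|}\bigr)(\psi,\psi')$, together with Lemma~\ref{Lm3.8}.

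The main obstacle I anticipate is not the wedge-by-wedge construction but the gluing at the two fixed points $(0,\pm r,0)$, where the coordinates \eqref{w1psi} and the splitting $L^2(S^1,|\cos\psi|^{-1}r\,\mathrm{d}\psi) = \mathscr{D}^+\oplus\mathscr{D}^-$ degenerate: one must verify that the locally defined smooth solutions extend smoothly across these points and agree there. The clean way around this is precisely \emph{not} to build the solution by hand in these charts but to invoke the coordinate-free existence/uniqueness theorem of \cite{BGP, D77} for normally hyperbolic operators on globally hyperbolic manifolds — which is after all exactly what the cited Theorem~1 of \cite{D77} provides — and to use the explicit wedge formulas only to \emph{identify} the solution and to make the finite-propagation-speed statement transparent. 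So the proof is essentially: (i) invoke \cite{BGP, D77} for existence, uniqueness, and support; (ii) verify consistency with \eqref{pmf-2} and \eqref{eqPropWedge'} on $\mathbb{W}_1$; (iii) deduce the support statement from Lemma~\ref{FSoL}, Proposition~\ref{ialpha}, and Lemma~\ref{Lm3.8}.
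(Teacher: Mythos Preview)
The paper does not prove this theorem at all: it is stated as a quoted result from Dimock~\cite{D77}, followed only by the remark that for $\mathbb{H}^2_{\rm loc}$ data this is Leray's classical existence and uniqueness theorem~\cite{L}. So there is nothing to compare your argument against---the authors simply invoke the cited reference.

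Your proposal is not wrong, but it is more than the situation calls for, and you recognise this yourself in the last paragraph: the clean route is exactly to cite \cite{D77} (or \cite{BGP}) for normally hyperbolic operators on globally hyperbolic manifolds and be done. The wedge-by-wedge reconstruction you sketch is consistent with the explicit formulas in Section~4.4 (Lemma~\ref{PropWedge}, Eq.~\eqref{pmf-2}), and your covering argument is fine since every point of $dS$ lies in some rotated wedge $W^{(\alpha)}$ whose time-zero slice is a half-circle $I_\alpha \subset S^1$. But this buys you nothing over the cited theorem: you still have to handle the gluing at the chart singularities $(0,\pm r,0)$, and the only honest way to do that---as you say---is to appeal to the coordinate-free statement. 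So steps (ii) and (iii) of your summary are useful \emph{consistency checks} relating the abstract theorem to the concrete formulas of the paper, but they are not needed as part of the proof, and the paper does not present them as such.
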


\begin{remark}
For functions in the \emph{Sobolev space}\index{Sobolev space} $\mathbb{H}^2_{\rm \, loc}(dS)$, 
this is the classical existence and uniqueness theorem of Leray \cite{L}. 
\end{remark}

Now consider the space
\label{hatckpage}
	\begin{equation*}
		\label{calSHat}
		\widehat {\mathfrak k} (S^1) 
		\doteq C^\infty_\mathbb{R}(S^1)\times C^\infty_\mathbb{R}(S^1)
	\end{equation*}
together with the {\em canonical symplectic form} 
	\begin{equation}
		\label{eqSplFormHat}
		\widehat \sigma\big((\mathbb{\phi}_1,\mathbb{\pi}_1),(\mathbb{\phi}_2,\mathbb{\pi}_2)\big)
		\doteq \langle \mathbb{\phi}_1,\mathbb{\pi}_2 \rangle_{L^2(S^1, r\,  {\rm d} \psi ) }-  
		\langle\mathbb{\pi}_1,\mathbb{\phi}_2 \rangle_{L^2(S^1, r \, {\rm d} \psi)} \; , 
	\end{equation}
where $r \, {\rm d} \psi $ is the line
element on $S^1$, see (\ref{new-eqVolInd}). The right hand side in \eqref{eqSplFormHat}
is zero, if $(\mathbb{\phi}_1,\mathbb{\pi}_1)$ and $(\mathbb{\phi}_2,\mathbb{\pi}_2)$ have disjoint support. 
For the open half\-circles~$I_\pm$ define 
	\begin{equation*}
		\label{calSHat2}
		\widehat {\mathfrak k} (I_\pm ) 
		\doteq{\mathcal D}_\mathbb{R}(I_\pm)\times{\mathcal D}_\mathbb{R}(I_\pm) \; . 
	\end{equation*}
 
\begin{proposition}
\label{nocheinlabel} 
The symplectic space $(\widehat {\mathfrak k} (S^1),  \widehat \sigma)$ carries a 
representation $ \Lambda  \mapsto \widehat  {\mathfrak u} ({\Lambda}) $,
$\Lambda \in O(1,2)$, defined  by  
\label{widehatccTpage}
	\begin{equation}
		\label{eqTtHat}
		\widehat {\mathfrak u} (\Lambda) (\mathbb{\phi},\mathbb{\pi})  
		\doteq\big(  [ \Lambda_* f]_{ \upharpoonright S^1} \; ,
		[ n  \Lambda_* f]_{\upharpoonright S^1}\big) \; . 
	\end{equation}
The triple  $\bigl(\, \widehat {\mathfrak k}(S^1), \widehat \sigma, \widehat  {\mathfrak u}  \bigr)$  is the 
{\em canonical classical dynamical system} associated to 
the covariant classical dynamical system specified in Definition~\ref{Def4-11}.
\end{proposition}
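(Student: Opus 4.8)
The plan is to prove Proposition~\ref{nocheinlabel} in three stages, mirroring the structure already used in Proposition~\ref{Prop4-10} for the covariant system. First I would establish that the formula \eqref{eqTtHat} is well-defined, i.e.\ independent of the choice of representative $f \in {\mathcal D}_\mathbb{R}(dS)$ of the class $[f] \in {\mathfrak k}(dS)$ corresponding to the Cauchy data $(\mathbb{\phi},\mathbb{\pi})$. Recall from Theorem~\ref{cauchyproblem} and Theorem~\ref{solutions} that the Cauchy data map $\widehat{\mathbb P}\colon {\mathcal D}_\mathbb{R}(dS) \to \widehat{\mathfrak k}(S^1)$, $f \mapsto \bigl( (\mathbb{E}f)_{\upharpoonright S^1}, (n\,\mathbb{E}f)_{\upharpoonright S^1}\bigr)$, factors through ${\mathfrak k}(dS) = {\mathcal D}_\mathbb{R}(dS)/\ker\mathbb{E}$ and in fact induces an isomorphism of ${\mathfrak k}(dS)$ with $\widehat{\mathfrak k}(S^1)$ (this is essentially the content of the uniqueness half of the Cauchy problem together with $\ker\mathbb{E} = (\square_{dS}+\mu^2){\mathcal D}_\mathbb{R}(dS)$). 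Since $g \in \ker\mathbb{E}$ implies $\Lambda_* g \in \ker\mathbb{E}$ (the pull-back of $(\square_{dS}+\mu^2)h$ is again of that form, because $\Lambda$ is an isometry and $\square_{dS}$ commutes with isometries), the expression $\bigl([\Lambda_* f]_{\upharpoonright S^1}, [n\,\Lambda_* f]_{\upharpoonright S^1}\bigr)$ depends only on $[f]$, hence only on $(\mathbb{\phi},\mathbb{\pi})$.

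Second, I would verify the group homomorphism property $\widehat{\mathfrak u}(\Lambda_1 \Lambda_2) = \widehat{\mathfrak u}(\Lambda_1)\widehat{\mathfrak u}(\Lambda_2)$. The cleanest route is to transport the already-constructed representation $\mathfrak u$ of $O(1,2)$ on $({\mathfrak k}(dS),\sigma)$ from Proposition~\ref{Prop4-10} through the isomorphism $\widehat{\mathbb P}\colon {\mathfrak k}(dS) \xrightarrow{\sim} \widehat{\mathfrak k}(S^1)$: concretely, $\widehat{\mathfrak u}(\Lambda) = \widehat{\mathbb P} \circ {\mathfrak u}(\Lambda) \circ \widehat{\mathbb P}^{-1}$. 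One checks that this agrees with \eqref{eqTtHat}: if $[f] = \widehat{\mathbb P}^{-1}(\mathbb{\phi},\mathbb{\pi})$, then ${\mathfrak u}(\Lambda)[f] = [\Lambda_* f]$ by \eqref{eqTt}, and $\widehat{\mathbb P}[\Lambda_* f] = \bigl((\mathbb{E}\Lambda_* f)_{\upharpoonright S^1}, (n\,\mathbb{E}\Lambda_* f)_{\upharpoonright S^1}\bigr)$; since $\mathbb{E}$ intertwines $\Lambda_*$ with the geometric action on solutions (because $\Lambda_*$ preserves $\ker\mathbb{E}$ and hence descends to solutions as $\mathbb{E}\Lambda_* f = \Lambda^*(\mathbb{E}f)$, using $\Lambda_*$ for the pull-back on test functions and $\Lambda^*$ for the pull-back on solutions up to the usual care about which direction the map goes), this equals $\bigl((\Lambda^* \mathbb{f})_{\upharpoonright S^1}, (n\,\Lambda^*\mathbb{f})_{\upharpoonright S^1}\bigr)$, which is the right-hand side of \eqref{eqTtHat}. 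The homomorphism property is then inherited from that of $\mathfrak u$ on ${\mathfrak k}(dS)$, which was established in Proposition~\ref{Prop4-10}, together with functoriality of the pull-back.

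Third, I would check that each $\widehat{\mathfrak u}(\Lambda)$ is a symplectomorphism of $(\widehat{\mathfrak k}(S^1),\widehat\sigma)$, i.e.\ $\widehat\sigma\bigl(\widehat{\mathfrak u}(\Lambda)(\mathbb{\phi}_1,\mathbb{\pi}_1), \widehat{\mathfrak u}(\Lambda)(\mathbb{\phi}_2,\mathbb{\pi}_2)\bigr) = \widehat\sigma\bigl((\mathbb{\phi}_1,\mathbb{\pi}_1),(\mathbb{\phi}_2,\mathbb{\pi}_2)\bigr)$. The key identity here is that $\widehat{\mathbb P}$ is (up to sign conventions) a symplectic isomorphism from $({\mathfrak k}(dS),\sigma)$ onto $(\widehat{\mathfrak k}(S^1),\widehat\sigma)$ --- that is, $\sigma([f],[g]) = \widehat\sigma(\widehat{\mathbb P}[f], \widehat{\mathbb P}[g])$. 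This is a Green's-identity computation: using \eqref{eqSplForm}, \eqref{integralkernel}, and the initial-data conditions \eqref{eqESigma}--\eqref{eqESigma2} for ${\mathscr E}$, one rewrites ${\mathcal E}(f,g)$ as an integral over the Cauchy surface $S^1$ of $\mathbb{f}\,(n\mathbb{g}) - (n\mathbb{f})\,\mathbb{g}$ with respect to $r\,{\rm d}\psi$, which is precisely $\widehat\sigma$ applied to the Cauchy data; this is the standard fact that the covariant and canonical symplectic forms agree under the Cauchy-data identification, and for the wedge $\mathbb{W}_1$ the necessary explicit kernel is available from Lemma~\ref{PropWedge}. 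Given this, symplecticity of $\widehat{\mathfrak u}(\Lambda)$ follows from symplecticity of ${\mathfrak u}(\Lambda)$ (Proposition~\ref{Prop4-10} together with the observation that $T_\Lambda$ preserves $\sigma$, which is immediate since $\Lambda$ is an isometry so ${\mathcal E}(\Lambda_* f, \Lambda_* g) = {\mathcal E}(f,g)$). The main obstacle I anticipate is the careful bookkeeping in this third step: verifying that $\widehat{\mathbb P}$ really intertwines $\sigma$ and $\widehat\sigma$ requires a clean statement of the Green's identity on the globally hyperbolic manifold $dS$ (justifying the reduction of the double space-time integral to a single integral over $S^1$ via the support and initial-value properties of ${\mathscr E}$), and one must be consistent about the normalisation of $n$, the orientation of $S^1$, and the factor conventions in $\widehat\sigma$ versus \eqref{integralkernel}; I would handle this by stating the intertwining property of $\widehat{\mathbb P}$ as a separate lemma (citing Dimock~\cite{D77} and the micro-local restriction result~\cite{Hoerm} for the distributional subtleties) and then deriving the proposition formally from it.
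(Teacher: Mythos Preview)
Your proposal is correct and follows essentially the same approach as the paper, which simply cites Theorem~\ref{cauchyproblem} (existence and uniqueness of the Cauchy problem) together with the invariance of the Klein--Gordon operator under the pull-back action of $O(1,2)$; you have fleshed out precisely the details that this terse reference leaves implicit. Your third step, establishing that the Cauchy-data map intertwines $\sigma$ and $\widehat\sigma$ via a Green's identity, is in fact isolated in the paper as the \emph{next} proposition (where it is proved by Stokes' theorem, citing \cite[Lemma~A.1]{D77}), so you are effectively proving both statements at once --- which is logically cleaner, since the symplectomorphism property of $\widehat{\mathfrak u}(\Lambda)$ does rely on that intertwining.
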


\begin{proof}This result follows directly from Theorem
  \ref{cauchyproblem} and  the invariance of the Klein--Gordon operator
under the adjoint pull-back action of $O(1,2)$. 
\end{proof}

\begin{proposition}
The map
	\begin{align*}
	 	{\mathbb  T} \colon ( {\mathfrak k}(dS),  \sigma,  {\mathfrak u} (\Lambda) )
		&\to (\widehat {\mathfrak k}(S^1), \widehat \sigma, \widehat  {\mathfrak u} (\Lambda) ) \\
	 	 [f]  & \mapsto   ( \mathbb{f}_{\upharpoonright S^1} \; ,  
		 (n \mathbb{f})_{\upharpoonright S^1} )
		\equiv (\mathbb{\phi} ,  \mathbb{\pi}) 
	\end{align*}
is symplectic. 
\end{proposition}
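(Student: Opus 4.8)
The plan is to verify that $\mathbb{T}$ is well-defined, linear, bijective, symplectic, and intertwines the two representations of $O(1,2)$. First I would check well-definedness: if $[f]=[f']$ in $\mathfrak{k}(dS)$, then $f-f' \in \ker\mathbb{E}$, so $\mathbb{E}f = \mathbb{E}f'$, whence the solutions $\mathbb{f}$ and $\mathbb{f}'$ coincide, and so do their restrictions and normal derivatives on $S^1$; linearity is immediate from linearity of $\mathbb{E}$ and of the restriction maps. For injectivity, suppose $\mathbb{T}[f]=(0,0)$; then $\mathbb{f}$ is a smooth solution of the Klein--Gordon equation with vanishing Cauchy data on the Cauchy surface $S^1$, so by the uniqueness part of Theorem~\ref{cauchyproblem} we get $\mathbb{f}\equiv 0$, i.e.\ $f\in\ker\mathbb{E}$ and $[f]=0$. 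For surjectivity, given $(\mathbb{\phi},\mathbb{\pi})\in\widehat{\mathfrak k}(S^1)$, Theorem~\ref{cauchyproblem} produces a smooth solution $\mathbb{\Phi}$ with those Cauchy data; by Theorem~\ref{solutions}~$i.)$ one has $\mathbb{\Phi}=\mathbb{E}f$ for some $f\in{\mathcal D}_{\mathbb R}(dS)$, and then $\mathbb{T}[f]=(\mathbb{\phi},\mathbb{\pi})$.

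Next I would establish that $\mathbb{T}$ is symplectic, i.e.\ $\widehat\sigma\bigl(\mathbb{T}[f],\mathbb{T}[g]\bigr)=\sigma\bigl([f],[g]\bigr)={\mathcal E}(f,g)$ for all $f,g\in{\mathcal D}_{\mathbb R}(dS)$. The idea is the standard Cauchy-surface representation of the symplectic form: for two solutions $\mathbb{f}=\mathbb{E}f$, $\mathbb{g}=\mathbb{E}g$ of the homogeneous equation, the bilinear quantity
	\[
		\langle \mathbb{f}_{\upharpoonright S^1}, (n\mathbb{g})_{\upharpoonright S^1}\rangle_{L^2(S^1, r\,{\rm d}\psi)} - \langle (n\mathbb{f})_{\upharpoonright S^1}, \mathbb{g}_{\upharpoonright S^1}\rangle_{L^2(S^1, r\,{\rm d}\psi)}
	\]
is independent of the Cauchy surface (a Green's-identity/divergence argument, using that both are bisolutions and that the Wronskian current is conserved), and it equals ${\mathcal E}(f,g)$. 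One clean way to see the latter is to use the initial-data characterisation \eqref{eqESigma}--\eqref{eqESigma2} of the kernel ${\mathscr E}$: writing $\mathbb{f}(x)=\int{\mathscr E}(x,y)f(y)\,{\rm d}\mu_{dS}(y)$, one pairs $g$ against $\mathbb{f}$ and shifts the $\square_{dS}+\mu^2$ across via Green's identity, collapsing the bulk integral onto $S^1$ where the boundary terms reproduce exactly the canonical pairing above; the signs work out because $(n_\ell{\mathscr E})_{\upharpoonright {\mathcal C}\times{\mathcal C}}=-\delta_{\mathcal C}$ and ${\mathscr E}_{\upharpoonright {\mathcal C}\times{\mathcal C}}=0$. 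This shows $\widehat\sigma(\mathbb{T}[f],\mathbb{T}[g])={\mathcal E}(f,g)=\sigma([f],[g])$; in particular $\mathbb{T}$ is injective again (nondegeneracy), and an isomorphism of symplectic spaces.

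Finally, the intertwining property $\mathbb{T}\circ{\mathfrak u}(\Lambda)=\widehat{\mathfrak u}(\Lambda)\circ\mathbb{T}$ for $\Lambda\in O(1,2)$ is essentially a matter of comparing definitions: ${\mathfrak u}(\Lambda)[f]=[\Lambda_*f]$ by \eqref{eqTt}, so $\mathbb{T}({\mathfrak u}(\Lambda)[f])=\bigl((\mathbb{E}\Lambda_*f)_{\upharpoonright S^1},(n\,\mathbb{E}\Lambda_*f)_{\upharpoonright S^1}\bigr)$, which by \eqref{eqTtHat} is exactly $\widehat{\mathfrak u}(\Lambda)(\mathbb{\phi},\mathbb{\pi})=\widehat{\mathfrak u}(\Lambda)\mathbb{T}[f]$, once one knows $\mathbb{E}\Lambda_*f=\Lambda_*\mathbb{E}f$, i.e.\ that $\mathbb{E}$ commutes with the pull-back by isometries (a consequence of the uniqueness of the retarded/advanced solutions in Theorem~\ref{fundamental} together with $\Lambda(\Gamma^\pm({\rm supp}\,f))=\Gamma^\pm({\rm supp}\,\Lambda_*f)$). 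I expect the only genuinely delicate point to be the symplectic identity --- specifically, justifying the Green's-identity manipulation at the level of distributions and confirming that the induced surface measure $r\,{\rm d}\psi$ and the normalisation $n={\rm cos}_\psi^{-1}r^{-1}\partial_t$ combine to give precisely the factors appearing in $\widehat\sigma$; all remaining steps are bookkeeping with definitions already set up in the excerpt.
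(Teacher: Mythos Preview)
Your approach is correct and, for the central symplectic identity, essentially matches the paper's: the paper also reduces ${\mathcal E}(f,g)=\int f\,\mathbb{E}g$ to the Cauchy-surface pairing $\langle\mathbb{f}_{\upharpoonright S^1},(n\mathbb{g})_{\upharpoonright S^1}\rangle-\langle(n\mathbb{f})_{\upharpoonright S^1},\mathbb{g}_{\upharpoonright S^1}\rangle$ via Stokes' theorem, citing Dimock~\cite[Lemma~A.1]{D77} as a black box rather than spelling out the Green's-identity manipulation you sketch. Your proposal is more thorough---you verify well-definedness, bijectivity, and the intertwining with $O(1,2)$, all of which the paper's proof simply omits (it proves only the form-preservation line and stops)---but the mathematical content of the key step is the same.
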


\begin{proof} Let $f, g \in {\mathcal D}_{\mathbb{R}}(dS)$. Then Stokes' theorem 
implies (see\footnote{Note that Dimock's operator $E$ differs from our conventions by a sign, as can be seen 
by comparing Corollary 1.2 in \cite{D77} with \eqref{eqESigma2}.} \cite[Lemma~A.1]{D77}) that 
	\begin{align*}
		\sigma ([f], [g]) & = {\mathcal E} (f,g) 
		\\
		&= \int_{dS} {\rm d} \mu_{dS}(x) \; f (x) (\mathbb{E} g) (x)   \\
		& = \int_{S^1}  r \, {\rm d} \psi \; \Bigl(  
		(\mathbb{E} f)_{\upharpoonright S^1} (\psi) (n \mathbb{E}  g)_{\upharpoonright S^1} (\psi) 
		-
		( n  \mathbb{E} f)_{\upharpoonright S^1} (\psi) (\mathbb{E} g)_{\upharpoonright S^1} (\psi)  \Bigr) \; 
		\\
		& =   
		\langle \mathbb{f}_{\upharpoonright S^1}, 
		(n \mathbb{g})_{\upharpoonright S^1} \rangle_{L^2(S^1, r \, {\rm d} \psi ) }
		-
		\langle (n \mathbb{f})_{\upharpoonright S^1}, 
		\mathbb{g}_{\upharpoonright S^1} \rangle_{L^2(S^1,  r\, {\rm d} \psi)} 
		\\
		& =  \widehat \sigma\Bigl( \bigl( \mathbb{f}_{\upharpoonright S^1},
		(n \mathbb{f})_{\upharpoonright S^1} \bigr), \bigl( \mathbb{g}_{\upharpoonright S^1},
		(n \mathbb{g})_{\upharpoonright S^1} \bigr)\Bigr) \; .
	\end{align*}
Thus $\mathbb{T}$ is symplectic. 
\end{proof}

The canonical projection  
\label{widehatcxC}
	\begin{align}
		\label{eqCovCan}
		\widehat {\mathbb{P}} \colon  \; \;{\mathcal D}_{\mathbb{R}} (dS) 
		& \to  C^\infty_\mathbb{R} (S^1) 
		\times C^\infty_\mathbb{R} (S^1)  \nonumber  \\ 
		f  & \mapsto   
		\bigl( \mathbb{f}_{ \upharpoonright S^1}, 
		\left( n \, \mathbb{f} \right)_{\upharpoonright S^1} \bigr) \equiv \widehat f
	\end{align}
maps a smooth, real valued function $f \in{\mathcal D}_\mathbb{R} 
(dS) $ with compact support
to the Cauchy data of a $C^\infty$-solution $\mathbb{f}$ of the Klein--Gordon equation~\eqref{3.25}. 

\begin{remark}
For the special case $f\in{\mathcal D}_\mathbb{R}(\mathbb{W}_1 )$, 
Eq.~\eqref{eqCovCan} yields
	\begin{align}  
		\label{pmfhut}
			\mathbb{f}_{ \upharpoonright S^1}(\psi)  	
 &=    \int r \, {\rm d} t' \Bigl( \tfrac{ \sin(t' \varepsilon )}{|\varepsilon|} \, 
			 	\cos^2_\psi f_{t'}\Bigr)  ( \psi) \; ,  \\ 
		\label{pmfhutableitung} \quad
			(n \, \mathbb{f} )_{\upharpoonright S^1}
			(\psi) 
			&= - \tfrac{1}{ r |\cos  \psi|} \int r \, {\rm d} t' \,  \left( \cos(t' \varepsilon) \, 
			\cos^2_\psi f_{t'}\right)  ( \psi) 
			\; ,  
	\end{align}
where $f_{t}(\psi):=f(  x (t, \psi)) $,
using again $\,  {\rm cos}_\psi^{-1}\varepsilon|\varepsilon|^{-1}=|\,  {\rm cos}_\psi|^{-1}$.
An explicit formula, which generalizes both \eqref{pmfhut} and \eqref{pmfhutableitung}
to $ f\in{\mathcal D}_\mathbb{R}(dS)$ will follow from Eq.~\eqref{comfkt} in Section \ref{new-3.4}.
\end{remark}

\begin{proposition}
\label{fsol}
Let $\widehat f \in \widehat {\mathfrak k} (I)$, $ I \subset S^1$. Then 
	\[
		\widehat{\mathfrak u} (\Lambda) \, \widehat f \in 
		\widehat {\mathfrak k} \; \Bigl( \bigl( \Gamma^{+}(\Lambda I ) 
		\cup \Gamma^{-}(\Lambda I) \bigr) \cap S^1 \Bigr) \; .
	\]
\end{proposition}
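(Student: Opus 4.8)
The plan is to reduce the statement to the support properties of the classical solution $\mathbb{f}$ and its Cauchy data, which have already been established. First I would recall that, by Proposition~\ref{Prop4-10} and the definition \eqref{eqTtHat} of $\widehat{\mathfrak u}(\Lambda)$, we have
	\[
		\widehat{\mathfrak u}(\Lambda)\,\widehat f = \widehat{\mathfrak u}(\Lambda) \bigl( \mathbb{f}_{\upharpoonright S^1}, (n\mathbb{f})_{\upharpoonright S^1} \bigr) = \bigl( [\Lambda_* f]_{\upharpoonright S^1}, [n\Lambda_* f]_{\upharpoonright S^1} \bigr) = \bigl( (\Lambda_*\mathbb{f})_{\upharpoonright S^1}, (n\Lambda_*\mathbb{f})_{\upharpoonright S^1} \bigr) \; ,
	\]
where the last equality uses that the pull-back by the isometry $\Lambda$ commutes with $\mathbb{E}$, so that $\mathbb{E}(\Lambda_* f) = \Lambda_* (\mathbb{E} f) = \Lambda_* \mathbb{f}$. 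Thus the Cauchy data of $\widehat{\mathfrak u}(\Lambda)\widehat f$ are precisely the restrictions to $S^1$ of $\Lambda_*\mathbb{f}$ and of its normal derivative.

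Next I would take a representative $f \in {\mathcal D}_\mathbb{R}(dS)$ of $[f]$ supported in an arbitrarily small neighbourhood ${\mathscr N}$ of $I$ inside $S^1$ --- this is possible by Theorem~\ref{solutions}~$i.)$, which lets us localise the source $f$ near the Cauchy surface. More precisely, since $\widehat f \in \widehat{\mathfrak k}(I)$ arises from Cauchy data supported in $I$, one can choose $f$ supported in a neighbourhood of $I$ shrinking to $I$. By Lemma~\ref{Lm3.8} the solution $\mathbb{f} = \mathbb{E} f$ then satisfies ${\rm supp\,}\mathbb{f} \subset \Gamma^+({\mathscr N}) \cup \Gamma^-({\mathscr N})$, and since $\Lambda$ is an isometry preserving the causal structure, ${\rm supp\,}(\Lambda_*\mathbb{f}) = \Lambda^{-1}({\rm supp\,}\mathbb{f}) \subset \Gamma^+(\Lambda^{-1}{\mathscr N}) \cup \Gamma^-(\Lambda^{-1}{\mathscr N})$. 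Intersecting with $S^1$ and letting ${\mathscr N} \searrow I$, the causal future/past of ${\mathscr N}$ shrinks to that of $I$ (by continuity of the causal relations and compactness of $\overline I$), so that
	\[
		{\rm supp\,}\bigl( (\Lambda_*\mathbb{f})_{\upharpoonright S^1} \bigr) \cup {\rm supp\,}\bigl( (n\Lambda_*\mathbb{f})_{\upharpoonright S^1} \bigr) \subset \bigl( \Gamma^+(\Lambda^{-1} I) \cup \Gamma^-(\Lambda^{-1} I) \bigr) \cap S^1 \; .
	\]
Relabelling $\Lambda \leftrightarrow \Lambda^{-1}$ (or equivalently noting that the statement is about $\Lambda I$ with $\widehat{\mathfrak u}(\Lambda)$ using $\Lambda_*$, which is the pull-back, hence geometrically a push-forward by $\Lambda^{-1}$ --- a sign/inverse bookkeeping point to be checked against \eqref{eqTtHat}) gives exactly the claimed inclusion $\widehat{\mathfrak u}(\Lambda)\widehat f \in \widehat{\mathfrak k}\bigl( (\Gamma^+(\Lambda I) \cup \Gamma^-(\Lambda I)) \cap S^1 \bigr)$.

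The main obstacle I anticipate is the limiting argument ``${\mathscr N} \searrow I$'': one must be careful that the Cauchy data $\widehat{\mathfrak u}(\Lambda)\widehat f$ do not depend on the choice of representative $f$ (they do not, since $\ker\mathbb{E}$ is mapped into itself and $\widehat{\mathbb P}$ factors through ${\mathfrak k}(dS)$), and that $\bigcap_{{\mathscr N} \supset I} \bigl(\Gamma^\pm(\Lambda^{-1}{\mathscr N})\bigr) = \Gamma^\pm(\Lambda^{-1}\overline I)$, which uses that $\Gamma^\pm$ of a compact set is closed and that the family is decreasing. A secondary subtlety is purely notational: tracking whether $\widehat{\mathfrak u}(\Lambda)$ implements $\Lambda$ or $\Lambda^{-1}$ on the geometric data, so that the endpoint really is $\Lambda I$ and not $\Lambda^{-1} I$; this is fixed once one unwinds \eqref{eqTtHat} together with $\mathbb{E}\circ\Lambda_* = \Lambda_*\circ\mathbb{E}$. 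Everything else is a direct consequence of Lemma~\ref{Lm3.8}, Theorem~\ref{solutions}, and the fact that $\Lambda$ preserves $\Gamma^\pm$.
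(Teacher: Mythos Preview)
Your approach is correct in substance, but the paper's proof is considerably more direct. The paper bypasses both the source-function representative and the limiting argument by invoking Theorem~\ref{cauchyproblem} instead of Lemma~\ref{Lm3.8}: since $\widehat f\in\widehat{\mathfrak k}(I)$ means the Cauchy data $(\mathbb{\phi},\mathbb{\pi})$ are supported in $I$, Theorem~\ref{cauchyproblem} gives immediately ${\rm supp\,}\mathbb{\Phi}\subset\Gamma^+(I)\cup\Gamma^-(I)$ for the associated solution $\mathbb{\Phi}$. Applying $\Lambda_*$ (an isometry) yields ${\rm supp\,}(\Lambda_*\mathbb{\Phi})\subset\Gamma^+(\Lambda I)\cup\Gamma^-(\Lambda I)$, and restricting to $S^1$ gives the claim. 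This is phrased in the paper via the symplectic isomorphism $\mathbb{T}$: one has $\mathbb{T}^{-1}\bigl(\widehat{\mathfrak u}(\Lambda)\widehat f\,\bigr)={\mathfrak u}(\Lambda)[f]$, and the support of the solution associated to ${\mathfrak u}(\Lambda)[f]$ is read off directly. No choice of $f\in{\mathcal D}_\mathbb{R}(\mathscr{N})$, no shrinking $\mathscr{N}\searrow I$, and no intersection of a decreasing family of closed sets is needed.

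Two further remarks on your write-up. First, your support computation has the inverse backwards: with the convention $\Lambda_* f(x)=f(\Lambda^{-1}x)$ used in the paper, one has ${\rm supp\,}(\Lambda_*\mathbb{f})=\Lambda({\rm supp\,}\mathbb{f})$, not $\Lambda^{-1}({\rm supp\,}\mathbb{f})$; this is why you end up needing the ad hoc ``relabelling $\Lambda\leftrightarrow\Lambda^{-1}$'' at the end. Second, the step ``choose $f$ supported in a neighbourhood of $I$'' is not quite what Theorem~\ref{solutions}~$i.)$ provides (that theorem localises near the whole Cauchy surface $\mathcal{C}$, not near $I\subset\mathcal{C}$); it can be justified, but the paper's route via Theorem~\ref{cauchyproblem} avoids the issue entirely.
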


\begin{proof} Let $[f] = {\mathbb  T}^{-1} \, \widehat f \; $ be the element in ${\mathfrak k}(dS)$ associated 
to the smooth solution~$f$ of the Klein--Gordon equation with Cauchy data given by $\widehat f$. It follows that 
\[
{\mathbb  T}^{-1} \bigl( \widehat{\mathfrak u} (\Lambda) \widehat f  \, \bigr) = {\mathfrak u} (\Lambda) [f] \; . 
\]
The smooth solution of the Klein--Gordon equation associated to $ {\mathfrak u} (\Lambda) [f]$ has support in 
$\Gamma^{+}(\Lambda I ) \cup \Gamma^{-}(\Lambda I) $; thus
the Cauchy data of the solution associated to $ {\mathfrak u} (\Lambda) [f]$ have support in 
$\bigl( \Gamma^{+}(\Lambda I ) \cup \Gamma^{-}(\Lambda I) \bigr) \cap S^1$.
\end{proof}

\goodbreak

\begin{proposition} 
\label{porp4.13}
The  rotations $\widehat {\mathfrak u} (R_0(\alpha))$, $\alpha \in  [0, 2\pi )$, which map
	\[
		\bigl( \mathbb{\phi} (\psi), \mathbb{\pi}(\psi) \bigr) 
		\mapsto  \bigl(\mathbb{\phi} (\psi - \alpha) , \mathbb{\pi} (\psi - \alpha) \bigr) , \qquad \alpha \in  [0, 2\pi ) \; , 
	\]
and the boosts $\widehat  {\mathfrak u} (\Lambda_1(t))$, $t \in\mathbb{R}$, which map
	\begin{equation}
		\label{3.77}
		(\mathbb{\phi}, \mathbb{\pi}) 
		\mapsto  
		  (\mathbb{\phi}_t , \mathbb{\pi}_t )  
		  \; ,
	\end{equation}
with
	\begin{align}
		\label{varphi-pi-t} 
			\mathbb{\phi}_t  
			&=  \cos(\varepsilon t)\mathbb{\phi}
					- \sin(\varepsilon t) \,
                                        \varepsilon^{-1} \,  {\rm cos}_\psi \,
                                        \mathbb{\pi}   \nonumber	 \\ 
			\mathbb{\pi}_t &=   
		( r \,  {\rm cos}_\psi)^{-1}\big(\varepsilon\sin(\varepsilon t)\mathbb{\phi} 
              + \cos(\varepsilon t) \,  {\rm cos}_\psi \; \mathbb{\pi} \big) \; ,  
	\end{align}
generate the representation $ \Lambda  \mapsto \widehat  {\mathfrak u} (\Lambda) $  of $SO_0(1,2)$ introduced in (\ref{eqTtHat}).
The 	points $(\mathbb{\phi} (\pm\frac{\pi}{2}) , \mathbb{\pi} (\pm\frac{\pi}{2}) )$ are fixed
points of the map  $t\mapsto (\mathbb{\phi}_t (\psi) , \mathbb{\pi}_t (\psi))  $. 
The representers of the reflections $P_1$ and $ T$ are 
	\begin{align}  
		\label{eqJCan}
		\widehat  {\mathfrak u} (P_1 ) \colon  (\mathbb{\phi},\mathbb{\pi})& \mapsto  
		\bigl( (P_1)_*\mathbb{\phi}, (P_1)_*\mathbb{\pi} \bigr) \; ,  \\ 
		\label{eqJCan2}
		\widehat  {\mathfrak u} (  T) \colon  (\mathbb{\phi},\mathbb{\pi})& \mapsto 
		 ( \mathbb{\phi}, - \mathbb{\pi} ) \; .  
	\end{align}
\end{proposition}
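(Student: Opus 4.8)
The plan is to identify, for each one-parameter subgroup, the explicit solution of the Cauchy problem it induces and then read off the action on Cauchy data from the formula $\widehat{\mathfrak u}(\Lambda)(\mathbb{\phi},\mathbb{\pi}) = \bigl([\Lambda_* f]_{\upharpoonright S^1}, [n\Lambda_* f]_{\upharpoonright S^1}\bigr)$ in \eqref{eqTtHat}. Since $R_0(\alpha)$, $\Lambda_1(t)$, together with the reflections $P_1$ and $T$, generate $O(1,2)$ (the other boost $\Lambda_2$ being recovered as a rotate of $\Lambda_1$), and since $\widehat{\mathfrak u}$ is a representation by Proposition~\ref{nocheinlabel}, it suffices to verify the four displayed formulas \eqref{3.77}--\eqref{eqJCan2} and the statement about fixed points. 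First I would dispose of the rotations: for $\Lambda = R_0(\alpha)$ the pull-back on $dS$ simply rotates the $\psi$-coordinate on $S^1$ (the circle $S^1$ being $R_0(\alpha)$-invariant) and commutes with the future-pointing normal $n$, so $\Lambda_* f$ restricted to $S^1$ is $\mathbb{\phi}(\psi-\alpha)$ and its normal derivative is $\mathbb{\pi}(\psi-\alpha)$; similarly the reflections $P_1$ and $T$ leave $S^1$ invariant, $P_1$ acting geometrically (hence on both $\mathbb{\phi}$ and $\mathbb{\pi}$ by its pull-back) while $T$ reverses the time-orientation and therefore flips the sign of the normal $n$ but fixes $S^1$ pointwise, giving $(\mathbb{\phi},-\mathbb{\pi})$.

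The substantive point is the boost $\Lambda_1(t)$. Here I would use the coordinates \eqref{w1psi} for the double wedge $\mathbb{W}_1$, in which $\Lambda_1(t)$ acts by the time-translation $t' \mapsto t' + t$, and the restriction of the Klein--Gordon operator is $\square_{\mathbb{W}_1} + \mu^2 = (r^2\cos^2\psi)^{-1}(\partial_t^2 + \varepsilon^2)$ from \eqref{varepsilon}. A solution $\mathbb{\Phi}$ with Cauchy data $(\mathbb{\phi},\mathbb{\pi})$ at $t=0$ therefore satisfies $(\partial_t^2+\varepsilon^2)\mathbb{\Phi}=0$ inside $\mathbb{W}_1$, with $\mathbb{\Phi}_{\upharpoonright t=0}=\mathbb{\phi}$ and, using $n = r^{-1}\cos_\psi^{-1}\partial_t$ from \eqref{eqnSigma}, $(\partial_t\mathbb{\Phi})_{\upharpoonright t=0}= r\,\cos_\psi\,\mathbb{\pi}$. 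The unique such solution is $\mathbb{\Phi}(t) = \cos(\varepsilon t)\mathbb{\phi} + \varepsilon^{-1}\sin(\varepsilon t)\,r\,\cos_\psi\,\mathbb{\pi}$; wait — the sign: since pulling back by $\Lambda_1(t)$ replaces the argument by $\Lambda_1(t)^{-1}\cdot$, i.e. shifts $t'$ by $-t$, the new data at $t'=0$ are the old data evaluated at $t'=-t$, which gives $\mathbb{\phi}_t = \cos(\varepsilon t)\mathbb{\phi} - \sin(\varepsilon t)\varepsilon^{-1}\cos_\psi\mathbb{\pi}$, matching \eqref{varphi-pi-t}. Differentiating along $n$ and evaluating at $t'=0$ then yields $\mathbb{\pi}_t = (r\cos_\psi)^{-1}\bigl(\varepsilon\sin(\varepsilon t)\mathbb{\phi} + \cos(\varepsilon t)\cos_\psi\mathbb{\pi}\bigr)$. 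This is essentially the content of \eqref{pmf-2}, \eqref{pmfhut}, \eqref{pmfhutableitung} specialised to sharp-time data, and the $\varepsilon^{-1}$ is harmless because zero is not an eigenvalue of $\varepsilon$ (Lemma~\ref{Lm3.5}). For the fixed points: the points $\psi=\pm\frac{\pi}{2}$ correspond to $(0,\pm r,0)\in dS$, which are the fixed points of the boost $\Lambda_1(t)$ (it is the stabiliser of $(0,r,0)$); hence the pull-back acts trivially there, so $(\mathbb{\phi}_t(\pm\frac{\pi}{2}),\mathbb{\pi}_t(\pm\frac{\pi}{2})) = (\mathbb{\phi}(\pm\frac{\pi}{2}),\mathbb{\pi}(\pm\frac{\pi}{2}))$ — alternatively one reads this off the formulas using $\cos(\pm\frac{\pi}{2})=0$ after checking $\varepsilon^{-1}\cos_\psi$ remains bounded at those points (which follows from $\cos_\psi^{-1}\varepsilon|\varepsilon|^{-1}=|\cos_\psi|^{-1}$, used repeatedly above).

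The main obstacle I anticipate is purely bookkeeping rather than conceptual: keeping the direction of the pull-back (inverse group element) and the sign of the normal vector field consistent across the half-circles $I_\pm$, where $\cos\psi$ changes sign and $\partial_t$ switches from future- to past-directed while $n$ stays future-directed by \eqref{eqnSigma}. One must also make sure the sharp-time distributions like $\delta\otimes h$, $\delta'\otimes h$ used implicitly are legitimate, which is exactly the content of the Remark following Lemma~\ref{PropWedge}; invoking that remark and Theorem~\ref{cauchyproblem} for uniqueness lets one avoid re-deriving the propagation formulas. A final bookkeeping remark: formulas \eqref{varphi-pi-t} were derived in the chart \eqref{w1psi} valid only on $\mathbb{W}_1$, but since every point of $S^1$ lies in some rotated double wedge $R_0(\alpha)\mathbb{W}_1$ and the expressions are rotation-covariant, and since the operators $\varepsilon$, $\cos_\psi$ are intrinsically defined on all of $L^2(S^1,|\cos\psi|^{-1}r\,{\rm d}\psi)$, the formulas hold globally on $S^1$. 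With the four generators verified and $\widehat{\mathfrak u}$ known to be a representation of $O(1,2)$, the proposition follows.
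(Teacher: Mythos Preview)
Your approach is essentially the same as the paper's: solve the wave equation $(\partial_t^2+\varepsilon^2)\mathbb{\Phi}=0$ in the wedge coordinates \eqref{w1psi}, read off the transformed Cauchy data via the pull-back in \eqref{eqTtHat}, and handle the reflections by noting that $(P_1)_*$ commutes and $T_*$ anti-commutes with the normal $n$. The paper also checks the fixed points at $\psi=\pm\tfrac{\pi}{2}$ analytically, using that $(\varepsilon^2 h)(\psi\pm\tfrac{\pi}{2})=O(\psi)$ compensates the vanishing of $\cos_\psi$ there, which is the analytic counterpart of your geometric observation that $(0,\pm r,0)$ are fixed by $\Lambda_1(t)$; your remark about extending from $\mathbb{W}_1$ to $S^1$ via rotation covariance is a useful addition that the paper leaves implicit.
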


\begin{proof} Recall \eqref{3.32} and 
consider the boosts
$t \mapsto \Lambda_1 (t)$, acting on the Cauchy data on~$S^1$.  
Now combine  (\ref{eqTtHat}) and  the definition of $\Lambda_*$ to conclude that the boosts
$ \widehat {\mathfrak u} (\Lambda_1 (t)) (\mathbb{\phi},\mathbb{\pi}) $, $ t \in\mathbb{R}, $ 
are determined by  
$  \mathbb{\Phi}_{ \upharpoonright S^1 \cup \, \mathbb{W}_1} $, 
where  $\mathbb{\Phi}$ is the solution of the Klein--Gordon equation \eqref{3.25} specified in Theorem~\ref{cauchyproblem}. 
Evaluate~\eqref{3.77}
with care---write it out explicitly and take advantage of the fact that $\varepsilon^2$ is a differential operator which satisfies 
$(\varepsilon^2 h)  (\psi\pm\frac{\pi}{2}) = O(\psi)$ for $h \in C^\infty (S^1)$, just as $\cos (\psi\pm\frac{\pi}{2})$---to show that the 
map is well-defined  for $\psi  \to \pm \frac{\pi}{2}$ and 
	\[
		\textstyle{
		\bigl(\mathbb{\phi}_t (\pm \frac{\pi}{2}) , \mathbb{\pi}_t (\pm \frac{\pi}{2}) \bigr) = 
		\bigl(\mathbb{\phi} (\pm \frac{\pi}{2}), \mathbb{\pi}(\pm \frac{\pi}{2}) \bigr) }\qquad \forall t \in\mathbb{R} \; .
	\]
(This ensures that $\mathbb{\phi}_t$ and $\mathbb{\pi}_t$ are both well-defined despite the fact that
the coordinate system is degenerated 
at $\psi = \pm \frac{\pi}{2}$.) It remains to construct $\widehat {\mathfrak u} (\Lambda_1 (t))$ in the 
space-time region~$\mathbb{W}_1$. On~$\mathbb{W}_1$, the Klein--Gordon equation \eqref{3.25} reads 
	\begin{equation}
		\label{3.79}
		\frac{1}{r^2 \cos^2 \psi}\,(\partial_t^2+\varepsilon^2) \mathbb{\Phi}=0 \; , 
	\end{equation}
using \eqref{varepsilon}. 
The real valued solution  of  \eqref{3.79} in the region 
$\mathbb{W}_1$ with  Cauchy data   (see  \eqref{eqnSigma})
	\begin{equation}
		\label{3.80}
			\mathbb{\phi}  = \mathbb{\Phi}_{ \upharpoonright S^1} \; , 
			\qquad 
			\mathbb{\pi} = \frac{1}{r \cos \psi} \, (\partial_t \mathbb{\Phi})_{ \upharpoonright S^1} \; ,
	\end{equation}
 is   
	$
		\mathbb{\Phi}  (t,\cdot )= \cos(\varepsilon t)\mathbb{\phi} + 
 		 \sin(\varepsilon t) \, \varepsilon^{-1} \,  {\rm cos}_\psi\, \mathbb{\pi} $. 
Hence 
	\[
		\mathbb{\phi}_t  \equiv (\Lambda_1(t)_*\mathbb{\Phi}) _{ \upharpoonright S^1} 
		\qquad \text{and} \qquad
 		\mathbb{\pi}_t \equiv n (\Lambda_1(t)_*  \mathbb{\Phi}) _{ \upharpoonright S^1} \;   
	\]
are determined by 
	\begin{align*}
		(\Lambda_1(t)_*\mathbb{\Phi})(t',\cdot ) 
		&= \cos(\varepsilon (t'-t))\mathbb{\phi} + 
 				 \sin(\varepsilon (t'-t)) \, \varepsilon^{-1} \,  {\rm cos}_\psi \mathbb{\pi} \; ,\\ 
		(\partial_{t'} \Lambda_1(t)_*\mathbb{\Phi})(t',\cdot ) 
		&= -\varepsilon\sin(\varepsilon (t'-t))\mathbb{\phi} + 
 			 \cos(\varepsilon (t'-t)) \,  {\rm cos}_\psi \mathbb{\pi} \; . 
	\end{align*}
$\varepsilon^2 $ maps $ C_\mathbb{R}^\infty (S^1)$ to the functions in $C_\mathbb{R}^\infty (S^1 )$, which vanish at 
$\psi =\pm \frac{\pi}{2}$. Thus $\widehat  {\mathfrak u} (\Lambda_1(t))$ maps
	\[	
		C^\infty_\mathbb{R}(S^1)\times C^\infty_\mathbb{R}(S^1) 
		\mapsto C^\infty_\mathbb{R}(S^1)\times C^\infty_\mathbb{R}(S^1) \; . 
	\] 
The boosts $\Lambda_1(t)$, $t \in\mathbb{R}$, together with the rotations $R_0(\alpha)$, 
$\alpha \in [0, 2 \pi)$, generate $SO_0(1, 2)$. 

$(P_1)_*$ and $ T_*$  commute with the restriction to $S^1$, and $(P_1)_*$ 
commutes with $n$ while $ T_*$  anti-commutes with $n$. On the time-zero 
circle $S^1$, the spatial reflection $P_1$ acts as 
	\[
		P_1 \colon \psi \mapsto 
		\pi - \psi \;, \qquad \psi \in [ -\tfrac{\pi}{2}, \tfrac{3 \pi}{2}) \; . 
	\]
Thus \eqref{eqJCan} and \eqref{eqJCan2} follow. 
\end{proof}
\goodbreak

\goodbreak
The reflection at the edge of the wedge $W_1$
	\[
		(P_1T)_* \colon \; g (x_0, x_1,  x_2) 
		\mapsto g (-x_0, x_1,  -x_2) \; , \qquad g \in{\mathcal D} (dS) \; , 
	\]
gives rise to a double classical system in the sense of Kay~\cite{Kay1}: 

\goodbreak 
\begin{proposition}
\label{ssdd} The symplectic space $\widehat {\mathfrak k} 
\bigl(S^1 \setminus \{ \pm \frac{\pi}{2} \}\bigr) $ is the direct sum of  $\widehat {\mathfrak k} (I_+ )$ 
and~$ \widehat {\mathfrak k} (I_- )$. Moreover, 
\begin{itemize}
\item [$ i.)$] 	$
		\widehat \sigma (  \widehat f ,  \widehat g ) = 0  
	$
 for all $\widehat{f} \in \widehat {\mathfrak k} (I_+ )$ and $\widehat{g} \in \widehat {\mathfrak k} (I_- )$;  
\item [$ ii.)$] the maps $\{ \, \widehat  {\mathfrak u} (\Lambda_1 (t)) \}_{t \in\mathbb{R}}$ leave the 
subspaces $\widehat {\mathfrak k} (I_+ )$ and $\widehat {\mathfrak k} (I_- )$ invariant;
\item [$ iii.)$] $\widehat  {\mathfrak u} (P_1T)$ is an anti-symplectic involution,   which satisfies
	\[
		\widehat  {\mathfrak u} (P_1T)  \, \widehat {\mathfrak k} (I_+ ) 
		= \widehat {\mathfrak k} (I_- ) 
		\quad
		\text{and}
		\quad
		\Bigl[ \, \widehat  {\mathfrak u} (\Lambda_1 (t))  \, , 
		\, \widehat  {\mathfrak u} (P_1T) \Bigr]=0 \quad \forall t \in\mathbb{R} \; . 
	\]
\end{itemize} 
Thus $\bigl( \, \widehat {\mathfrak k} \bigl(S^1 
\setminus \{ \pm \frac{\pi}{2} \}\bigr) ,  \widehat  {\mathfrak u} (\Lambda_1 ) , 
\widehat  {\mathfrak u} (P_1T) \bigr)$ is a double classical linear dynamical system 
in the sense of~\ref{dcldsy}.
\end{proposition}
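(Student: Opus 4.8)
The plan is to verify the four conditions in the definition of a double classical linear dynamical system (Definition \ref{dcldsy}) by assembling the pieces already established above. First I would note that the direct sum decomposition $\widehat {\mathfrak k} \bigl(S^1 \setminus \{ \pm \tfrac{\pi}{2} \}\bigr) = \widehat {\mathfrak k} (I_+) \oplus \widehat {\mathfrak k} (I_-)$ is immediate from the definitions $\widehat {\mathfrak k} (I_\pm ) = {\mathcal D}_\mathbb{R}(I_\pm)\times {\mathcal D}_\mathbb{R}(I_\pm)$ together with $C^\infty_\mathbb{R}(S^1 \setminus \{ \pm \tfrac{\pi}{2} \}) = {\mathcal D}_\mathbb{R}(I_+) \oplus {\mathcal D}_\mathbb{R}(I_-)$, since a smooth function on $S^1$ vanishing at $\pm \tfrac{\pi}{2}$ together with all derivatives splits as a pair of compactly supported functions on the two open half-circles.

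Next I would dispatch $i.)$: for $\widehat f \in \widehat {\mathfrak k}(I_+)$ and $\widehat g \in \widehat {\mathfrak k}(I_-)$ the supports of the Cauchy data are disjoint subsets of $S^1$, so each of the two $L^2(S^1, r\,{\rm d}\psi)$-pairings in the definition \eqref{eqSplFormHat} of $\widehat\sigma$ vanishes; this is exactly the remark following \eqref{eqSplFormHat}. For $ii.)$, the boosts $\widehat {\mathfrak u}(\Lambda_1(t))$ act by the explicit formulas \eqref{varphi-pi-t}, which involve only $\cos(\varepsilon t)$, $\sin(\varepsilon t)\varepsilon^{-1}$, $\varepsilon\sin(\varepsilon t)$ and multiplication by $\cos_\psi$; by Lemma \ref{Lm3.5} the operator $\varepsilon$ (and hence every bounded function of $\varepsilon^2$ together with $\varepsilon^{-1}\sin(\varepsilon t)$ and $\varepsilon \sin(\varepsilon t)$) does not mix functions supported on $I_+$ with those supported on $I_-$, and multiplication by $\cos_\psi$ is obviously local; therefore $\widehat {\mathfrak u}(\Lambda_1(t))$ preserves $\widehat {\mathfrak k}(I_\pm)$. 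One should also check smoothness and compact support are preserved, but this is built into Proposition \ref{porp4.13}, which already asserts $\widehat {\mathfrak u}(\Lambda_1(t))$ maps $C^\infty_\mathbb{R}(S^1)\times C^\infty_\mathbb{R}(S^1)$ to itself; combined with locality this gives $\widehat {\mathfrak u}(\Lambda_1(t)) \widehat {\mathfrak k}(I_\pm) \subseteq \widehat {\mathfrak k}(I_\pm)$, and applying the same to $\Lambda_1(-t)$ gives equality.

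For $iii.)$ I would first record from \eqref{eqJCan}--\eqref{eqJCan2} that $\widehat {\mathfrak u}(P_1 T) = \widehat {\mathfrak u}(P_1)\widehat {\mathfrak u}(T)$ sends $(\mathbb{\phi},\mathbb{\pi})$ to $\bigl((P_1)_*\mathbb{\phi}, -(P_1)_*\mathbb{\pi}\bigr)$, where on $S^1$ the reflection $P_1$ is $\psi \mapsto \pi - \psi$; this is an involution since $P_1^2 = \mathbb{1}$, and it is anti-symplectic because $(P_1)_*$ is an orthogonal transformation of $L^2(S^1, r\,{\rm d}\psi)$ (so it preserves each inner product in \eqref{eqSplFormHat}) while the sign flip on the momentum interchanges the two terms with an overall minus. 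Since $\psi \mapsto \pi - \psi$ carries $I_+ = (-\tfrac{\pi}{2},\tfrac{\pi}{2})$ onto $I_- = (\tfrac{\pi}{2},\tfrac{3\pi}{2})$ and back, $\widehat {\mathfrak u}(P_1T)\,\widehat {\mathfrak k}(I_+) = \widehat {\mathfrak k}(I_-)$. Finally, $\bigl[\widehat {\mathfrak u}(\Lambda_1(t)), \widehat {\mathfrak u}(P_1T)\bigr] = 0$ follows because $\Lambda_1(t)$ and $P_1 T$ commute in $O(1,2)$ (both are diagonal-block transformations leaving the $x_1$-axis fixed, and $P_1T = {\rm diag}(-1,1,-1)$ commutes with $\Lambda_1(t)$ by direct matrix multiplication), so by functoriality of $\Lambda \mapsto \widehat {\mathfrak u}(\Lambda)$ — i.e.\ the fact that $\widehat {\mathfrak u}$ is a representation (Proposition \ref{nocheinlabel}) — the corresponding operators commute; alternatively one checks directly that $(P_1)_*$ commutes with $\varepsilon^2$ and with $\cos_\psi$, hence with every operator appearing in \eqref{varphi-pi-t}, and the momentum sign in $\widehat {\mathfrak u}(T)$ is harmless since \eqref{varphi-pi-t} is even under $t \mapsto -t$ combined with $\mathbb{\pi}\mapsto -\mathbb{\pi}$ only in the needed sense — the cleanest route is the group-theoretic one. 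The main obstacle is nothing deep but rather bookkeeping: one must be careful that all maps genuinely preserve the smoothness-and-vanishing conditions at the degenerate coordinate points $\psi = \pm\tfrac{\pi}{2}$, which is precisely the subtlety already handled in the proof of Proposition \ref{porp4.13} and which we may therefore invoke rather than redo.
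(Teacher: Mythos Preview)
Your proof is correct and follows the natural route; the paper in fact states this proposition without proof, leaving the verification to the reader based on the explicit formulas of Proposition~\ref{porp4.13} and the locality properties of $\varepsilon$ established in Lemma~\ref{Lm3.5}, which is precisely what you do.

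One small slip in your aside: $(P_1)_*$ does \emph{not} commute with $\cos_\psi$ but anti-commutes with it, since $\cos(\pi-\psi)=-\cos\psi$; likewise $(P_1)_*$ anti-commutes with $\partial_\psi$. The two sign flips cancel in $\cos_\psi\partial_\psi$, so $(P_1)_*$ does commute with $\varepsilon^2$ and with $\cos_\psi^2$, and a direct check of the commutation with $\widehat{\mathfrak u}(\Lambda_1(t))$ via \eqref{varphi-pi-t} still goes through once one tracks the signs correctly (using also that $\widehat{\mathfrak u}(T)$ flips the sign of $\mathbb{\pi}$). Since you correctly fall back on the group-theoretic argument---$P_1T=\mathrm{diag}(-1,1,-1)$ commutes with $\Lambda_1(t)$ as matrices and $\widehat{\mathfrak u}$ is a representation by Proposition~\ref{nocheinlabel}---this slip does not affect the validity of your proof.
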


\bigskip
\goodbreak 
In other words, the following diagram commutes:
\vskip -.5cm

\begin{picture}(200,140)

\put(130,100){$\longrightarrow$}
\put(60,100){$\bigl(\, \widehat {\mathfrak k} (I_+ ), \widehat \sigma \bigr)$}
\put(125,110){$\widehat  {\mathfrak u} (P_1T)$}
\put(35,70){$ {\widehat  {\mathfrak u} (\Lambda_1 (t))}$}
\put(210,70){$\widehat {\mathfrak u} (\Lambda_1 (t))$}
\put(125,50){$\widehat  {\mathfrak u} (P_1T)$}

\put(170,100){$ \bigl(\, \widehat {\mathfrak k} (I_- ), \widehat \sigma\bigr)$}

\put(60,40){$\bigl(\, \widehat {\mathfrak k} (I_+ ), \widehat \sigma \bigr)$}

\put(170,40){$\bigl(\, \widehat {\mathfrak k} (I_- ), \widehat \sigma \bigr)\, .$}

\put(130,40){$\longrightarrow$}

\put(85,85){\vector(0,-3){20}}
\put(200,85){\vector(0,-3){20}}

\end{picture}
\vskip - 1.8cm
\quad

\bigskip
\begin{remark}
\label{re-3-20}
The domain of the map $\widehat {\mathbb{P}} $ extends 
to testfunctions $f, g$ of the form given in \eqref{sharp-timetestfunction}. 
Use \eqref{pmfhut} and \eqref{pmfhutableitung} to compute the corresponding Cauchy data:
	\begin{equation}
		\label{sharp-timetestfunction-phipi}
		\big( \mathbb{f}_{ \upharpoonright S^1}, 
							 (n \, \mathbb{f} )_{\upharpoonright S^1})
  		 = \left( 0, - \tfrac{h}{  r } \right) \equiv  (\mathbb{\phi}, \mathbb{\pi}) \; , 
	\end{equation}
and, by partial integration, 
	\begin{equation}
	\label{sharp-timetestfunction-phipi2}
	\big(  \mathbb{g}_{ \upharpoonright S^1}, 
							 (n \, \mathbb{g})_{\upharpoonright S^1})
 		  = \left(  \tfrac{h}{ r } , 0 \right) \equiv  (\mathbb{\phi}, \mathbb{\pi}) \; . 
	\end{equation}
All elements in~$\widehat {\mathfrak k}(I_+)$ are linear combinations of the Cauchy data 
		  arising from  sharp-time testfunctions $f, g$ 
		  of the form described in \eqref{sharp-timetestfunction}.
\end{remark}

\part{Free Quantum Fields}

\chapter{One-Particle Hilbert Spaces}

\section{The covariant one-particle Hilbert space} 
\label{new-3.4}

Our basic strategy is to use, just as in Minkowski space (see, \emph{e.g.},~\cite{RS}), 
the restriction of the Fourier-Helgason transform to the upper mass shell,
\label{mass-shell-ft-page}
	\begin{align}
	\label{mass-shell-ft}
				{\mathcal F}_{+ \upharpoonright \nu} \colon  \; \; {\mathcal D}_{\mathbb{R}} (dS) 
				& \to  \widetilde {\mathfrak h}_\nu (\partial V^+)  \nonumber  \\ 
				 f  & \mapsto  \sqrt{ \frac{c_\nu {\rm e}^{-  \pi \nu}  r   }{\pi}} 	\; 		
				 	 \widetilde {f}_+  (  \, . \, , s^+) \doteq  {\widetilde f}_\nu \; , 
	\end{align}
with $s^+$ given by \eqref{dd1},
to define a (complex valued) semi-definite quadratic form 
	\begin{equation}
	\label{nuclear}
	{\mathcal D}_{\mathbb{R}} (dS) \ni f, g \mapsto 
	\langle \widetilde f_\nu , \widetilde g_\nu \rangle_{ \widetilde {{\mathfrak h}} (\partial V^+)  }
	\end{equation}
on the test-functions.  (We will suppress the index $\nu$ when possible, for example, we 
will frequently write $\widetilde {{\mathfrak h}} (\partial V^+) $
instead of $\widetilde {{\mathfrak h}}_\nu (\partial V^+)$.) 
 The value of the positive normalisation constant (see Harish-Chandra~\cite{HC1, HC2}) 
	\[
		c_\nu = -\frac{1}{2\sin  (\pi s^+ )} = 
		\frac{1}{2 \cos (i \nu \pi ) }  
	\]  
is chosen such that twice the imaginary part of the scalar product \eqref{nuclear} equals the 
value of the symplectic form $\sigma$ of the classical dynamical system given in \eqref{eqSplForm}; 
for further details, see the discussion preceding \eqref{comfkt} below.
Using \eqref{eq:gamma-2}, one can show that
	\[
		c_\nu = \frac{\Gamma\left(1 + s^+ \right)\Gamma\left(1 +s^- \right)}{2\pi} = c_{-\nu} \; . 
	\]

The following result of Faraut was pointed out to us by J.~Bros.

\begin{proposition}[Faraut \cite{Fa}, Prop.~II.4]
\label{propo2.5}
Let $f \in {\mathcal D}_{\mathbb{R}} (dS)$ and $0 <  -i \nu < \frac{1}{2}$. Then
	\begin{align*} 
		\int_{dS} {\rm d} \mu_{dS} (x) \; f (x) \; (  x_\pm \cdot  p )^{- { \frac{1}{2} } - i \nu}
		&= \frac{\Gamma(\frac{1- i \nu}{2})}{\Gamma (\frac{3}{4}
			+\frac{i \nu}{2})}  \; \frac{\sqrt{\pi} \; \Gamma(\frac{1}{2}+ i \nu)}{ 2^{-\frac{1}{2}+ i \nu} \, \Gamma( i \nu)}
			 \int_{\Gamma} {\rm d} \mu_{\Gamma} ( p') \; (  p \cdot p')^{-\frac{1}{2}+ i \nu} 
			\nonumber \\
		& \qquad \qquad \quad \times
			\int_{dS} {\rm d} \mu_{dS} (x) \; f (x) \; 
			(  x_\pm \cdot  p' )^{- { \frac{1}{2} } + i \nu} \;  ,  
	\end{align*}
where $\Gamma$ is a closed curve on the forward light cone~$\partial V^+$,  which encloses the origin.
\end{proposition}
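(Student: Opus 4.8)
The plan is to recognize the claimed identity as an instance of the intertwining property of the operator $A_\nu$ from Proposition~\ref{Prop:2.1.0}, transported from the circle $\Gamma_0$ to the whole forward light cone via homogeneity. First I would fix $f\in{\mathcal D}_{\mathbb R}(dS)$ and regard the map $p\mapsto \widetilde f_\pm(p,-\tfrac12-i\nu)=\int_{dS}{\rm d}\mu_{dS}(x)\,f(x)\,(x_\pm\cdot p)^{-\frac12-i\nu}$ as a continuous function on $\partial V^+$ that is homogeneous of degree $s^-=-\tfrac12-i\nu$ (for the range $0<-i\nu<\tfrac12$ the singularity of $(x\cdot p)^{s^-}$ is integrable, by the discussion after \eqref{eqPW-b}, so the integral converges absolutely and defines a genuine $L^1$-function, homogeneous by inspection). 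Likewise $\widetilde f_\pm(\,\cdot\,,-\tfrac12+i\nu)$ is homogeneous of degree $s^+=-\tfrac12+i\nu$. The content of the proposition is then exactly that applying the ``spherical'' convolution kernel $(p\cdot p')^{-\frac12+i\nu}$ against the degree-$s^+$ function reproduces, up to the explicit Gamma-factor constant, the degree-$s^-$ function; this is the kernel of $A_\nu$ written in its Lorentz-invariant (light-cone) form rather than its restriction to $\Gamma_0$.

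The key steps, in order: (1) Use the characterisation from Subsection on measures on contours (the paragraph around ${\rm d}\mu_\Gamma$) to rewrite the right-hand side integral $\int_\Gamma {\rm d}\mu_\Gamma(p')\,(p\cdot p')^{-\frac12+i\nu}\,\widetilde f_\pm(p',-\tfrac12+i\nu)$: since the integrand, as a function of $p'$, is homogeneous of degree $-1$ (degree $-\tfrac12+i\nu$ from the plane wave times degree $-\tfrac12-i\nu$ from $\widetilde f_\pm$), the contour integral is independent of the choice of $\Gamma$, and I may take $\Gamma=\Gamma_0$, the unit circle on the cone from \eqref{gamma-0}. (2) Parametrize $p=|p|\,\eta$, $p'=\eta'$ with $\eta,\eta'\in\Gamma_0\cong SO(2)$; writing $\eta=R_0(\alpha)(1,0,-1)$ and $\eta'=R_0(\alpha')(1,0,-1)$ one computes $\eta\cdot\eta' = 1-\cos(\alpha-\alpha')$ (up to the normalisation built into $\varrho_\nu$), so $(p\cdot p')^{-\frac12+i\nu}$ becomes, after pulling out $|p|^{-\frac12+i\nu}$, precisely the kernel $\varrho_{-\nu}(\alpha-\alpha')$ appearing in \eqref{dd2} — note the sign of $\nu$, since here the plane-wave exponent is $-\tfrac12+i\nu=s^+$ whereas $\varrho_\nu$ carries $-\tfrac12-i\nu$. (3) Identify the restriction $h(\alpha')\doteq\widetilde f_\pm(\eta'(\alpha'),s^+)$ as an element of $C_0(\Gamma_0)$ and invoke Proposition~\ref{Prop:2.1.0}: $A_{-\nu}$ (with its explicit Gamma-prefactor) maps the degree-$s^+$ boundary data to the degree-$s^-$ boundary data and intertwines $\widetilde u^\pm_\nu$ with $\widetilde u^\pm_{-\nu}$. (4) Finally, restore the overall homogeneity factor $|p|^{s^-}$ on both sides and unwind the Gamma-function normalisation constants: the prefactor $\frac{\Gamma((1-i\nu)/2)}{\Gamma(3/4+i\nu/2)}\cdot\frac{\sqrt\pi\,\Gamma(1/2+i\nu)}{2^{-1/2+i\nu}\Gamma(i\nu)}$ must be checked to coincide with the inverse of the Bargmann prefactor $\frac{\Gamma(\frac12-i\nu)}{\Gamma(\frac12)\Gamma(-i\nu)}\pi$ read off from \eqref{dd2}, using the duplication formula $\Gamma(z)\Gamma(z+\tfrac12)=2^{1-2z}\sqrt\pi\,\Gamma(2z)$ (which the paper invokes as ``\eqref{eq:gamma-2}'') together with $\Gamma(\tfrac12)=\sqrt\pi$.

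The main obstacle I expect is step (4): matching the two presentations of the normalisation constant. The kernel $\varrho_\nu$ is normalized so that $A_\nu A_{-\nu}=\mathbb 1$ on $\Gamma_0$ (Remark~\ref{A-unitary}), whereas Faraut's constant in the proposition is pinned down by a different convention (his measure ${\rm d}\mu_\Gamma$ and his plane-wave normalisation on $dS$ may differ from the ones fixed here by explicit powers of $2$, $\pi$, and ${\rm e}^{\pm\pi\nu}$). So the bookkeeping — converting between ${\rm d}\mu_{dS}$ on the hyperboloid and ${\rm d}\mu_\Gamma$ on the contour, and between the two conventions for complex powers of the scalar product — is where the real work lies; a careful application of the Legendre duplication formula and the reflection formula $\Gamma(z)\Gamma(1-z)=\pi/\sin\pi z$ should close the gap. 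A secondary, more technical point is justifying the interchange of the $x$-integration over $dS$ with the $p'$-integration over $\Gamma$; this is handled by Fubini once one notes that for $0<-i\nu<\tfrac12$ both the inner and outer integrals are absolutely convergent (the plane waves are in $L^1_{\rm loc}$ and $f$ has compact support), so no regularization or distributional boundary-value argument is needed in this parameter range — which is precisely why the hypothesis $0<-i\nu<\tfrac12$ is imposed.
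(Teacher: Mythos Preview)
The paper does not prove this proposition; it is quoted from Faraut with attribution and followed only by a Remark observing that on the circle $\Gamma_0$ one has $p\cdot p'=1-\cos(\alpha-\alpha')$, so that the kernel $(p\cdot p')^{-\frac12+i\nu}$ reproduces Bargmann's factor $\varrho_\nu$ from \eqref{Barg-factor}. Your instinct to read the identity as the light-cone incarnation of the intertwiner $A_\nu$ is therefore exactly the connection the authors highlight.

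That said, your step (3) contains a genuine gap. Proposition~\ref{Prop:2.1.0} asserts only the representation-theoretic intertwining relation $A_\nu\,\widetilde u_{-\nu}^\pm(g)=\widetilde u_\nu^\pm(g)\,A_\nu$; it does \emph{not} tell you what $A_{-\nu}$ does to any particular vector, and in particular it does not say that $A_{-\nu}$ sends the plane wave $p'\mapsto (x_\pm\cdot p')^{-\frac12+i\nu}$ to a multiple of $p\mapsto (x_\pm\cdot p)^{-\frac12-i\nu}$. After you apply Fubini (your last paragraph), the statement to be proved is precisely the pointwise plane-wave identity
\[
\int_\Gamma {\rm d}\mu_\Gamma(p')\,(p\cdot p')^{-\frac12+i\nu}\,(x_\pm\cdot p')^{-\frac12+i\nu}
\;=\;\text{(const)}\cdot (x_\pm\cdot p)^{-\frac12-i\nu},
\]
and this is the actual analytic content of Faraut's proposition, not a consequence of the abstract intertwining. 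The intertwining relation does buy you something: by $SO_0(1,2)$-covariance in $(x,p)$ it suffices to verify the identity at a single convenient pair $(x,p)$, after which the constant is fixed. But that single evaluation is an honest integral computation (a Beta/Gamma integral once you parametrize $\Gamma_0$), and it cannot be replaced by an appeal to Proposition~\ref{Prop:2.1.0}. Your step (4) bookkeeping is then part of that same computation, not a separate issue. Note also that the integrand over $p'$ is homogeneous of degree $-1+2i\nu\neq -1$ in the complementary-series range, so the contour independence you invoke in step (1) is not automatic and the choice of $\Gamma$ enters the normalisation.
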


\begin{remark} Choosing 
$p  = (1, \cos \alpha,  \sin \alpha)$  and $ p'  = (1, \cos \alpha', \sin \alpha')$ 
we find 
	\[
		p  \cdot p' = 1- \cos (\upsilon-\upsilon') \; .   
	\]
Thus we have recovered the factor \eqref{Barg-factor} first introduced by Bargmann; see the definition of the intertwined $A_\nu$.  
\end{remark}

\begin{lemma}
Let $\mu^2  = \frac{1}{4r^2} + m^2$, \emph{i.e.}, $\nu^2 = m^2 r^2$. It follows that 
	\[
		\ker {\mathcal F}_{+ \upharpoonright \nu} = {\ker {\mathbb P}} = (\square_{dS}+\mu^2) {\mathcal D}_{\mathbb{R}} (dS) \; . 
	\]
\end{lemma}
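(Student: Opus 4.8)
The statement to prove is that $\ker {\mathcal F}_{+ \upharpoonright \nu} = \ker {\mathbb P} = (\square_{dS}+\mu^2){\mathcal D}_{\mathbb{R}}(dS)$, where $\nu^2 = m^2 r^2$ and $\mu^2 = \tfrac{1}{4r^2}+m^2$. The second equality, $\ker {\mathbb P} = (\square_{dS}+\mu^2){\mathcal D}_{\mathbb{R}}(dS)$, is immediate: by Theorem~\ref{solutions}~$ii.)$ we have $\ker \mathbb{E} = (\square_{dS}+\mu^2){\mathcal D}_{\mathbb{R}}(dS)$, and $\ker {\mathbb P} = \ker \mathbb{E}$ by construction of ${\mathbb P}$ (its definition factors through $\mathbb{E}$; recall $\ker \mathbb{E} = \ker {\mathcal E}$). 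So the real content is the first equality, and in fact it suffices to prove the single inclusion $\ker {\mathcal F}_{+ \upharpoonright \nu} = \ker \mathbb{E}$.

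First I would establish $(\square_{dS}+\mu^2){\mathcal D}_{\mathbb{R}}(dS) \subseteq \ker {\mathcal F}_{+ \upharpoonright \nu}$. This is the "easy'' direction and follows from the fact that the plane waves $(x_\pm \cdot p)^{s^+}$ satisfy the Klein--Gordon equation on $dS$, namely $(\square_{dS}+\mu^2)(x_\pm \cdot p)^{s^+} = 0$ by \eqref{eqPW-new}, together with $-s^+(s^++1) = \mu^2 r^2$. Concretely, for $h \in {\mathcal D}_{\mathbb{R}}(dS)$ one writes $\widetilde{(\square_{dS}+\mu^2)h}_+(p,s^+) = \int_{dS}{\rm d}\mu_{dS}(x)\,\bigl((\square_{dS}+\mu^2)h\bigr)(x)\,(x_\pm\cdot p)^{s^+}$ and integrates by parts, using that $\square_{dS}$ is formally self-adjoint with respect to ${\rm d}\mu_{dS}$ (it is the Laplace--Beltrami operator), to move the operator onto the plane wave, where it annihilates it. A mild care point: $(x_\pm\cdot p)^{s^+}$ is only a distributional boundary value, singular on the horosphere $P_{-\infty}$ where $x\cdot p = 0$; since $\Re s^+ = -\tfrac12 > -1$ the singularity is integrable (see the remark after \eqref{eqPW-b}), and the integration by parts can be justified by approaching from within the tuboid ${\mathcal T}_\pm$ and taking boundary values, or by a cutoff/limiting argument. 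This gives $(\square_{dS}+\mu^2){\mathcal D}_{\mathbb{R}}(dS) \subseteq \ker {\mathcal F}_{+\upharpoonright\nu}$, hence $\ker \mathbb{E} \subseteq \ker {\mathcal F}_{+\upharpoonright\nu}$.

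The harder direction is the reverse inclusion $\ker {\mathcal F}_{+\upharpoonright\nu} \subseteq \ker \mathbb{E}$, equivalently: if $\widetilde f_+(\,\cdot\,,s^+) \equiv 0$ on $\partial V^+$ then $\mathbb{E}f = 0$, i.e.\ $[f] = 0$ in ${\mathfrak k}(dS)$. The natural route is the inversion/Plancherel machinery of Bros--Moschella and Molchanov. Given $f \in {\mathcal D}_{\mathbb{R}}(dS) \subset L^2(dS,{\rm d}\mu_{dS})$, decompose $f = f_+ + f_- + f_\leftarrow + f_\rightarrow$ as in Theorem~\ref{hardy}. The key analytic input is Proposition~\ref{propo2.5} (Faraut): the value of $\widetilde f_\pm(p,-\tfrac12-i\nu)$ on the single symmetry axis $\nu = mr \in \mathbb{R}^+_0$ determines $\widetilde f_\pm(p,s)$ on the whole strip $-1<\Re s<0$, because the transform is holomorphic in $s$ there (Prop.~8.a of~\cite{BM2}, quoted after \eqref{ftps}) and the two symmetry axes are related by the intertwining identity of Prop.~\ref{propo2.5}; analyticity in the strip $|\Im\nu|<\tfrac12$ (Lemma~\ref{nu-analyicity}) then propagates the vanishing. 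Thus $\widetilde f_{+}(\,\cdot\,,s^+)=0$ forces $\widetilde f_+(\,\cdot\,,s)=0$ for all $s$ in the strip, and by the inversion formula \eqref{inverse} we get $F_+\equiv 0$, hence $f_+ = 0$; by the analogous argument (or by complex conjugation symmetry $\overline{\widetilde f_+(p,-\tfrac12-i\nu)}$ relating the $+$ and $-$ transforms for real $f$) also $f_- = 0$. Then $f = f_\leftarrow + f_\rightarrow$, and one invokes that the chiral Hardy-space components correspond to boundary values from the chiral tuboids ${\mathcal T}_\leftarrow$, ${\mathcal T}_\rightarrow$, which are the de Sitter analogues of the "space-like'' tubes; a distribution whose restriction to each wedge boost orbit extends analytically as in Theorem~\ref{flattube} — in particular one supported only in the chiral tubes — has vanishing symplectic pairing $\sigma([f],[g])$ against all $g\in{\mathcal D}_{\mathbb{R}}(dS)$, because such pairings are computed by the commutator function ${\mathscr E}$ which is supported in the causal (Lorentzian) region and whose Fourier--Helgason content lives on the principal series. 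Concretely, using Lemma~\ref{PropWedge}, on any double wedge ${\mathscr E}$ is $-\tfrac{\sin(\varepsilon(t-t'))}{|\varepsilon|}$, whose "momentum'' support is the spectrum of $\varepsilon$, matched to the mass-shell data; chiral components do not contribute. Hence ${\mathcal E}(g,f)=0$ for all $g$, i.e.\ $f\in\ker{\mathcal E}=\ker\mathbb{E}$.

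The step I expect to be the main obstacle is making the last paragraph rigorous — specifically, showing that the chiral pieces $f_\leftarrow, f_\rightarrow$ lie in $\ker\mathbb{E}$. The clean statement is that $\mathbb{E}$ factors through the principal-series part of the spectral decomposition of the Casimir operator, so that a test function whose Fourier--Helgason transform vanishes on the mass shell $s=s^+$ has $\mathbb{E}f=0$; but turning "vanishes on the mass shell'' into "$f_\leftarrow+f_\rightarrow\in\ker\mathbb{E}$'' requires knowing precisely how $\mathbb{E}$ (equivalently ${\mathscr E}$, via \eqref{eqPropWedge'} and its Lorentz transforms) is expressed through the Fourier--Helgason transform restricted to the principal series, and that the chiral tuboids contribute nothing to this representation. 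An alternative, perhaps cleaner, line avoiding the Hardy decomposition: show directly that $\mathbb{E}f$ depends on $f$ only through $\widetilde f_\nu$ by writing the commutator function ${\mathscr E}(x,y)$ itself as an integral $\int_\Gamma {\rm d}\mu_\Gamma(p)\,\bigl[(x_-\cdot p)^{s^+}(y_+\cdot p)^{s^-} - (x_+\cdot p)^{s^+}(y_-\cdot p)^{s^-}\bigr]$ (the de Sitter analogue of the Pauli--Jordan function, obtainable from \eqref{ac} by taking boundary values) — then $(\mathbb{E}f)(x) = c\int_\Gamma {\rm d}\mu_\Gamma(p)\,(x_-\cdot p)^{s^+}\,\widetilde f_+(p,s^-) - (\text{c.c.\ term})$ manifestly vanishes when $\widetilde f_+(\,\cdot\,,s^\pm)=0$, and the homogeneity argument after \eqref{Gammacpluspage} makes the $\Gamma$-integral well-defined and $\Gamma$-independent. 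Either way, the burden is the identity relating ${\mathscr E}$ to the mass-shell Fourier transform; once that is in hand both inclusions are short.
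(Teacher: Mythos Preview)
Your easy direction ($\ker\mathbb{P}\subseteq\ker{\mathcal F}_{+\upharpoonright\nu}$) matches the paper exactly: integrate by parts and use that the plane waves solve the Klein--Gordon equation.

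For the reverse inclusion, your main approach via the Hardy decomposition has a genuine gap. You write that vanishing of $\widetilde f_+(\,\cdot\,,s^+)$ at the single value $\nu=mr$ ``determines $\widetilde f_\pm(p,s)$ on the whole strip'' via analyticity and Faraut's intertwiner. But analyticity of $\nu\mapsto\widetilde f_+(p,-\tfrac12-i\nu)$ in the strip $|\Im\nu|<\tfrac12$, together with vanishing at \emph{one} point $\nu=mr$, does not force vanishing anywhere else; and Faraut's Proposition~\ref{propo2.5} only relates the values at $s^+$ and $s^-$ (the two roots of $-s(s+1)=\mu^2r^2$), not at all $s$. So you cannot conclude $F_+\equiv 0$ from the inversion formula, and the subsequent chiral-tuboid discussion --- which you already flagged as the main obstacle --- never gets off the ground.

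Your ``alternative, cleaner line'' at the end is the right idea and is essentially what the paper does. The paper's proof of the hard inclusion is a single forward reference: once one knows
\[
{\mathcal E}(f,g)=2\,\Im\langle\widetilde f_\nu,\widetilde g_\nu\rangle_{\widetilde{\mathfrak h}(\partial V^+)}
\]
(equivalently ${\mathscr E}=2\,\Im{\mathcal W}^{(2)}$, established in \eqref{comfkt} by checking that $2\,\Im{\mathcal W}^{(2)}$ is an antisymmetric Klein--Gordon bisolution with the correct Cauchy data \eqref{eqESigma}--\eqref{eqESigma2}, using the explicit Legendre form of ${\mathcal W}^{(2)}$ from Proposition~\ref{legendre}), the inclusion is immediate: $\widetilde f_\nu=0$ gives ${\mathcal E}(f,g)=0$ for all $g$, hence $f\in\ker{\mathcal E}=\ker\mathbb{E}=\ker\mathbb{P}$. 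No Hardy decomposition, no chiral components, no analytic continuation in $s$ is needed --- the commutator function lives on the fixed mass shell by construction, and that is the whole point.
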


\begin{proof}
If $f \in \ker {\mathbb P}$, 
then \eqref{cmzero} implies that
there exists $g \in {\mathcal D}_{\mathbb{R}} (dS)$ 
such that $f = (\square_{dS}+\mu^2) g$. 
Evaluate $ {\mathcal F}_{+ \upharpoonright \nu} \bigl( (\square_{dS}+\mu^2) g \bigr)$ 
using the definition of the Fourier-Helgason transform 
(see  \eqref{eqPW-new}) and  
	\[ 
		(\square_{dS}+\mu^2) (  x_+  \cdot  p )^{s^\pm} = 0
	\]
for $s^\pm$ given by \eqref{dd1} with $\zeta^2= \mu^2 r^2$. This shows that  
$\ker {\mathcal F}_{+ \upharpoonright \nu} \supset {\ker {\mathbb P}}$. The inclusion 
$\ker {\mathcal F}_{+ \upharpoonright \nu} \subset {\ker {\mathbb P}}$ will follow from the fact that 
	\[
		{\mathcal E} (f, g) = 2 \Im \langle \widetilde f , \widetilde g \rangle_{ \widetilde{\mathfrak h} (\partial V^+)} \; .
	\]
This will be verified in \eqref{comfkt} below.
\end{proof}

\subsection{Real Hilbert Spaces}
The kernel of the quadratic form \eqref{nuclear} equals $\ker {\mathcal F}_{+ \upharpoonright \nu}$.  This allows us to 
turn the real symplectic spaces ${\mathfrak k} (X)$ into {\em real} pre-Hilbert spaces
\label{FHtransformationmasspage}
\label{ophs1page}
	\begin{equation}
		\label{ophs2}
		{\mathfrak h}^\circ (X) \doteq \bigl( {\mathfrak k} (X), 
		\Re \langle \, \widetilde \cdot , \widetilde \cdot \,  \rangle_{ \widetilde {{\mathfrak h}} (\partial V^+)  } \bigr) \; , 
		\qquad X = dS,  {\mathcal O }, W \; .  
	\end{equation}
The completion of ${\mathfrak h}^\circ (X)$ 
defines the real Hilbert spaces 
${\mathfrak h} (X)$, $X = dS,  {\mathcal O }, W$. 
Their real valued scalar product is given by the real part 
	\[
	\Re \langle f , g \rangle_{{\mathfrak h} (dS)}
	 \doteq \tfrac{1}{4} \Bigl( \| f + g \|_{{\mathfrak h} (dS)}
	- \| f - g \|_{{\mathfrak h} (dS)} \Bigr) \; 
	\]
of the {\em complex valued} scalar product 
	\begin{equation}
		\label{ophs2a}
		\langle [f] , [g] \rangle_{{\mathfrak h} (dS)} \doteq 
		\langle \, \widetilde f_\nu , \widetilde g_\nu  \,  \rangle_{ \widetilde {{\mathfrak h}} (\partial V^+)  }\;   , 
		\qquad [f], [g] \in {\mathfrak k} (X)\; ,   
	\end{equation}
with $\nu = \nu(\mu)$ given by \eqref{dd1} with $\zeta^2 = \mu^2 r^2$. 

\bigskip
\subsection{Complex Hilbert Spaces}
The question now arises, whether these real Hilbert spaces can be interpreted as complex Hilbert spaces, \emph{i.e.}, 
whether or not they carry an intrinsic complex structure. The answer to this question depends on the choice of~$X \subset dS$. 
In case $X = dS$,  the real-valued  scalar product $  f , g \mapsto  
\Re \langle  f , g \rangle_{{\mathfrak h} (dS)}  $ can be used to define an 
operator~${\mathscr I}$, 
	\[
		\Re \langle {\mathscr I} f , g \rangle_{{\mathfrak h} (dS)} 
		\doteq\Im \langle  f , g \rangle_{{\mathfrak h} (dS)}    
		\qquad \forall   g \in {\mathfrak h} (dS) \; . 
	\]
The Riesz lemma (applied on the real Hilbert space ${\mathfrak h} (dS)$) fixes the vector 
	\[
		{\mathscr I}  f  \in {\mathfrak h} (dS)
	\]
uniquely, since the 
symplectic form $\Im \langle \, . \, ,  \, . \, \rangle_{{\mathfrak h} (dS)}  $ is non-degener\-ated on 
${\mathfrak h}^\circ (dS)$. The operator ${{\mathscr I}}$ satisfies 
	\[
	 	\Im \langle {\mathscr I}  f, g \rangle_{{\mathfrak h} (dS)}  
		= - \Im \langle f , {\mathscr I}  g \rangle_{{\mathfrak h} (dS)}   
	\] 
and $ {\mathscr I}^2 = -1 $, and therefore defines a complex structure: for $f \in {\mathfrak h} (dS)$ we have
	\begin{equation}
		\label{CS}
		(\lambda_1 + i \lambda_2) f  = \lambda_1 f  + \lambda_2 ({\mathscr I} f)  \; , 
		\qquad \lambda_1, \lambda_2 \in \mathbb{R} \; . 
	\end{equation}
This turns the real Hilbert space $\bigl( {\mathfrak h} (dS), \Re \langle \, . \, ,  \, . \, \rangle_{{\mathfrak h} (dS)}  \bigr)$  into a 
complex Hilbert space $\bigl({\mathfrak h} (dS), \langle \, . \, ,  \, . \, \rangle_{{\mathfrak h} (dS)} \bigr)$. 
The scalar product $f, g \mapsto \langle f ,  g \rangle_{{\mathfrak h} (dS)}$  
is anti-linear in $f$ and linear in $g$ with respect to the complex structure~defined in \eqref{CS}. 

\begin{remark}
In case $X= {\mathcal O }$ (with ${\mathcal O }$ bounded) or $X=W$,  the spaces ${\mathfrak h} ({\mathcal O })$ 
and ${\mathfrak h} (W)$ are only {\em real} subspaces of~${\mathfrak h} (dS)$. We will later show that their 
\emph{complex linear span} 
is dense in ${\mathfrak h} (dS)$. 
\end{remark}


\subsection{A representation of $O(1,2)$}
In \cite{BMJ} we will identify the quantum one-particle space with some abstract Hilbert space~${\mathfrak h}$ 
carrying a unitary irreducible representation of the Lorentz group $SO_0(1,2)$. The {\em real} subspaces ${\mathfrak h} ({\mathcal O })$ 
associated to open bounded subsets of ${\mathcal O }$ will be identified using the concept of 
{\em modular localization}~\cite{BGL}. 
Here we show that ${\mathfrak h}  (dS)$ carries a representation of $O(1,2)$. 

\begin{proposition}
\label{UIR-FH}
There is a unitary representation $ u$ of $SO_0(1,2)$ on~${\mathfrak h}  (dS)$ such that for $f \in {\mathcal D}_{\mathbb{R}} (dS)$
\[ u (\Lambda) [f] =  [ \Lambda_*f ]  \; , \qquad \Lambda \in SO_0(1,2) \; ; \]
and consequently, 
$u(\Lambda) {\mathfrak h}  ({\mathcal O}) =   {\mathfrak h}  (\Lambda {\mathcal O}) $, $\Lambda \in SO_0(1,2)$.  
In other words, $u$ acts geometrically on ${\mathfrak h}  (dS)$. 
\end{proposition}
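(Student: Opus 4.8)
The plan is to transport the representation ${\mathfrak u}$ of $O(1,2)$ on the symplectic space $({\mathfrak k}(dS),\sigma)$ from Proposition~\ref{Prop4-10} to the one-particle Hilbert space ${\mathfrak h}(dS)$, using the Fourier--Helgason transform ${\mathcal F}_{+\upharpoonright\nu}$ as an intertwiner with the unitary irreducible representations $\widetilde u_\nu^{\pm}$ of Section~\ref{UIRc}. First I would define $u(\Lambda)[f]\doteq[\Lambda_* f]$ on the dense subspace ${\mathfrak k}(dS)={\mathcal D}_{\mathbb R}(dS)/\ker{\mathbb E}$. This is well defined because $\Lambda_*$ preserves $\ker{\mathbb E}=\ker{\mathcal F}_{+\upharpoonright\nu}$, and it is a group homomorphism because $(\Lambda\Lambda')_*=\Lambda_*\Lambda'_*$ holds for the adjoint pull-back $(\Lambda_*f)(x)=f(\Lambda^{-1}x)$.

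The central computation is the covariance identity
\[
  \widetilde{(\Lambda_* f)}_\pm(p,s)=\widetilde f_\pm(\Lambda^{-1}p,s)\;,\qquad\Lambda\in SO_0(1,2)\;,
\]
which follows from the substitution $x\mapsto\Lambda x$ in \eqref{ftps}, the $SO_0(1,2)$-invariance of ${\rm d}\mu_{dS}$, the relation $(\Lambda x)\cdot p=x\cdot(\Lambda^{-1}p)$ (valid since $\Lambda^T\mathbb{g}\Lambda=\mathbb{g}$), and the $SO_0(1,2)$-invariance of the tuboids ${\mathcal T}_\pm$, which guarantees that the boundary-value prescription $x_\pm$ is respected. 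Since $\widetilde f_+(\,\cdot\,,s^+)$ is homogeneous of degree $s^+=-\tfrac12-i\nu$, restricting to the circle $\Gamma_0$ and inserting the Radon--Nikodym density \eqref{RN} shows that ${\mathcal F}_{+\upharpoonright\nu}$ intertwines $[f]\mapsto[\Lambda_*f]$ with $\widetilde u_\nu^{\pm}(\Lambda)$; the normalisation factor $\sqrt{c_\nu{\rm e}^{-\pi\nu}r/\pi}$ in \eqref{mass-shell-ft} is a scalar and does not affect this. By Proposition~\ref{Prop:2.1} and Theorem~\ref{TH-irr} the operators $\widetilde u_\nu^{\pm}(\Lambda)$ are unitary on $\widetilde{\mathfrak h}_\nu(\partial V^+)$ — for the principal series directly, for the complementary series because the Bargmann form $\langle\,\cdot\,,A_\nu\,\cdot\,\rangle$ is positive and intertwining (Proposition~\ref{Prop:2.1.0}) — so $u(\Lambda)$ preserves both the real part and the imaginary part (equivalently $\tfrac12\sigma$, consistently with Proposition~\ref{Prop4-10}) of $\langle\,\cdot\,,\,\cdot\,\rangle_{{\mathfrak h}(dS)}$ on ${\mathfrak k}(dS)$. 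Hence $u(\Lambda)$ extends to a real-linear isometry of the completion ${\mathfrak h}(dS)$, and since it is a homomorphism with $u(\Lambda)u(\Lambda^{-1})=\mathbb 1$ it is surjective.

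Three points remain. Complex linearity: as the complex structure ${\mathscr I}$ is fixed intrinsically by $\Re\langle{\mathscr I}f,g\rangle_{{\mathfrak h}(dS)}=\Im\langle f,g\rangle_{{\mathfrak h}(dS)}$ via the Riesz lemma, and $u(\Lambda)$ preserves both bilinear forms, uniqueness of ${\mathscr I}f$ forces $u(\Lambda){\mathscr I}={\mathscr I}u(\Lambda)$; thus $u(\Lambda)$ is complex linear and, being a surjective isometry, unitary. Strong continuity: on the dense set ${\mathfrak k}(dS)$ one has $\Lambda_*f\to f$ in ${\mathcal D}_{\mathbb R}(dS)$ as $\Lambda\to\mathbb 1$ (supports remain in a fixed compact set, all derivatives converge uniformly), and $f\mapsto[f]$ is continuous from ${\mathcal D}_{\mathbb R}(dS)$ into ${\mathfrak h}(dS)$ since the quadratic form \eqref{nuclear} is continuous on test functions; together with $\|u(\Lambda)\|=1$ this yields strong continuity of $\Lambda\mapsto u(\Lambda)$ on all of ${\mathfrak h}(dS)$. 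The geometric action: $\Lambda_*$ maps ${\mathcal D}_{\mathbb R}({\mathcal O})$ onto ${\mathcal D}_{\mathbb R}(\Lambda{\mathcal O})$, hence $u(\Lambda)$ maps ${\mathfrak k}({\mathcal O})$ onto ${\mathfrak k}(\Lambda{\mathcal O})$, and by continuity it maps the closure ${\mathfrak h}({\mathcal O})$ onto ${\mathfrak h}(\Lambda{\mathcal O})$.

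I expect the main obstacle to be the bookkeeping in the intertwining step: one must match the normalisation $c_\nu$ and the density $\lambda_g$ of the strongly quasi-invariant measure entering the definition of $\widetilde u_\nu^{\pm}$ in \eqref{eqIndRepSO12}, and, in the complementary-series range $0<-i\nu<\tfrac12$, reconcile the scalar product on $\widetilde{\mathfrak h}_\nu(\partial V^+)$ — given by the nonlocal Bargmann kernel $\varrho_\nu$ — with the naive $L^2$ pairing used in \eqref{nuclear}; this is precisely where Faraut's Proposition~\ref{propo2.5} is needed, as it exhibits the factor $\varrho_\nu$ and thereby makes the two descriptions of the norm coincide.
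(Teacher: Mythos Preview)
Your proof is correct and follows the same approach as the paper: both establish the intertwining identity ${\mathcal F}_{+\upharpoonright\nu}(\Lambda_* f)=\widetilde u_\nu^+(\Lambda)\,{\mathcal F}_{+\upharpoonright\nu}f$ via the substitution $x\mapsto\Lambda x$ in \eqref{ftps} and then invoke the unitarity of $\widetilde u_\nu^+$ from Proposition~\ref{Prop:2.1} to conclude norm preservation. The paper's version is a four-line norm computation and does not spell out complex linearity (commutation with ${\mathscr I}$) or strong continuity; your treatment of these points is a legitimate elaboration, and your remark that the complementary-series case requires Faraut's Proposition~\ref{propo2.5} to reconcile the Bargmann kernel with the $L^2$ pairing is well taken, though the paper leaves this implicit.
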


\begin{proof} In order to extend the pull-back from ${\mathfrak h}^ \circ (dS)$ to a unitary representation  of 
$SO_0(1,2)$ on ${\mathfrak h} (dS)$, we have to show that
$ \| [ \Lambda_* f ] \|_{{\mathfrak h} (dS)} = \| [f] \|_{{\mathfrak h} (dS)}  $. 
By construction, 
	\begin{align*}
 		\| [ \Lambda_* f ] \|_{{\mathfrak h} (dS)} & =
		\Bigl\| \int_{dS} {\rm d} \mu_{dS} ( x ) \; f( \Lambda^{-1} x ) \; (  x_+ \cdot  p )^{s^+} \Bigr\|_{\widetilde {\mathfrak h} (\partial V^+)} \\
		&= \Bigl\| \int_{dS} {\rm d} \mu_{dS} ( x ) \; f( x ) \; (  \Lambda x_+ \cdot  p )^{s^+} \Bigr\|_{\widetilde {\mathfrak h} (\partial V^+)}  \\
		&= \Bigl\| \int_{dS} {\rm d} \mu_{dS} ( x ) \; f( x ) \; (  x_+ \cdot \Lambda^{-1} p )^{s^+}  \Bigr\|_{\widetilde {\mathfrak h} (\partial V^+)} \\
		&= \|  \widetilde u_\nu^+   (\Lambda)   \widetilde f_\nu  \|_{\widetilde {\mathfrak h} (\partial V^+)}
		=  \| \widetilde f_\nu  \|_{\widetilde {\mathfrak h} (\partial V^+)} = \| [f] \|_{{\mathfrak h} (dS)}\;   , 
		\end{align*}
where ${s^+}$ is given by \eqref{dd1} with $\zeta^2 = \mu^2 r^2$. 
\end{proof}

\begin{proposition} 
Let $u(T)$ and $u(P)$ be defined by 
	\[ 
		\widetilde{u}_\nu^+ (T) {\mathcal F}_{+ \upharpoonright \nu} f \doteq {\mathcal F}_{+ \upharpoonright \nu} T_* \overline{f} 
		\; , 
		\qquad u(P_2)[f] \doteq  [ P_{2*}f ] \; , \qquad f \in {\mathcal D}_{\mathbb{R}} (dS) \; .
	\]
The operators ${u}(T)$  and $u(P_2)$ extend to well-defined (anti-)unitary operators on 
${\mathfrak h}(dS)$.  They extend the representation $u$ form $SO_0(1,2)$ to $O(1,2)$.
\end{proposition}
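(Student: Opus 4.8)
The plan is to verify that the two operators $u(T)$ and $u(P_2)$ are well-defined, (anti-)unitary, and that together with the $SO_0(1,2)$-representation $u$ constructed in Proposition~\ref{UIR-FH} they satisfy the $O(1,2)$-relations. First I would check that $u(P_2)[f]\doteq[P_{2*}f]$ is well-defined: since $P_2\in O(1,2)$ is an isometry of $dS$, the pull-back $P_{2*}$ maps $\ker{\mathbb P}=(\square_{dS}+\mu^2){\mathcal D}_{\mathbb R}(dS)$ into itself (the Klein--Gordon operator is invariant under the adjoint pull-back action of the full isometry group $O(1,2)$, exactly as in the proof of Proposition~\ref{Prop4-10}), so the map descends to ${\mathfrak k}(dS)$. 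Unitarity then follows by the same change-of-variables computation as in the proof of Proposition~\ref{UIR-FH}: one moves $P_2$ from $x$ onto $p$ inside the Harish-Chandra plane wave $(x_+\cdot p)^{s^+}$, obtaining $\|[P_{2*}f]\|_{{\mathfrak h}(dS)}=\|\widetilde u_\nu^+(P_2)\widetilde f_\nu\|_{\widetilde{\mathfrak h}(\partial V^+)}$, and then invokes the unitarity of the representer $\widetilde u_\nu^\pm(P_2)$ of the spatial reflection $P_2$ on $\widetilde{\mathfrak h}_\nu^\pm$ established in the Proposition following \eqref{tilde-time-reflection} (recall $\widetilde u_\nu^\pm(P)$ acts by $h(\alpha)\mapsto h(\alpha-\pi)$, which is manifestly isometric for every norm in Proposition~\ref{Prop:2.1}).

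Next I would treat the time reflection $T$. The subtlety here is that $T$ reverses time orientation, so its representer must be \emph{anti}-unitary; this is why the definition $\widetilde u_\nu^+(T){\mathcal F}_{+\upharpoonright\nu}f\doteq{\mathcal F}_{+\upharpoonright\nu}T_*\overline f$ involves complex conjugation of $f$. To see this is consistent, note that $\overline{(x_+\cdot p)^{s^+}}=(x_-\cdot p)^{\overline{s^+}}$ and that $T$ interchanges the two Lorentzian tuboids ${\mathcal T}_+$ and ${\mathcal T}_-$ (by \eqref{tube-0}, since $\Im z\cdot p\mapsto -\Im (Tz)\cdot (Tp)$ under the relevant reflection, and $T$ acts on $\partial V^+$ by $p\mapsto $ a point of $\partial V^-$), so the right-hand side is again a Fourier--Helgason transform of a test function and well-definedness modulo $\ker{\mathcal F}_{+\upharpoonright\nu}=\ker{\mathbb P}$ follows from $T_*\overline{\ker{\mathbb P}}=\ker{\mathbb P}$. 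The correct identification of $\widetilde u_\nu^+(T)$ with the anti-unitary operator already constructed on $\widetilde{\mathfrak h}_\nu^\pm$ in Section~\ref{sec:time reflection} (the operator $h\mapsto\overline{(A_\nu h)(\alpha-\pi)}$ or $h\mapsto\overline{h(\alpha-\pi)}$, according to $\zeta\gtrless 1/2$) is what makes anti-unitarity on ${\mathfrak h}(dS)$ immediate, using the computation displayed just before the Definition in Section~\ref{sec:time reflection}, where $\|\widetilde u_\nu^\pm(T)h\|_\nu=\|h\|_\nu$ is verified via the symmetry $\varrho_\nu(\alpha)=\varrho_\nu(-\alpha)$.

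Finally I would check the group relations. Since $SO_0(1,2)$, $T$ and $P_1$ generate $O(1,2)$ (equivalently $SO_0(1,2)$ together with $T$ and $P_2$, as $P_1=P\cdot P_2\cdot$ (a rotation) in our conventions), it suffices to verify $u(T)u(\Lambda)u(T)^{-1}=u(T\Lambda T^{-1})$ and $u(P_2)u(\Lambda)u(P_2)^{-1}=u(P_2\Lambda P_2^{-1})$ for $\Lambda\in SO_0(1,2)$, together with $u(T)^2=\mathbb 1$, $u(P_2)^2=\mathbb 1$, and the commutation of $u(T)$ and $u(P_2)$. Each of these transports, via the intertwining isometry ${\mathcal F}_{+\upharpoonright\nu}$, to the corresponding identity among $\widetilde u_\nu^\pm(g)$, $g\in O(1,2)$, which were already established in Section~\ref{sec:time reflection} (the relation $\Pi_\nu^\pm(T)\Pi_\nu^\pm(g)\Pi_\nu^\pm(T)^{-1}=\Pi_\nu^\pm(TgT^{-1})$ is proved there using $TgT=PgP$ and the cocycle identity for $\lambda_g$); so the verification reduces to unwinding definitions. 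I expect the main obstacle to be purely bookkeeping: carefully tracking how complex conjugation interacts with the boundary-value prescription $(x_\pm\cdot p)^{s^\pm}$ and with the choice of mass-shell axis $s^+=-\tfrac12-i\nu$ for both real $\nu$ (principal series) and purely imaginary $\nu$ (complementary series), so that $u(T)$ comes out anti-linear with $u(T)^2=\mathbb 1$ in \emph{both} regimes. No genuinely new estimate is needed; everything rests on Proposition~\ref{UIR-FH}, the (anti-)unitarity of the representers on $\widetilde{\mathfrak h}_\nu^\pm$ from Section~\ref{sec:time reflection}, and the isometry property of ${\mathcal F}_{+\upharpoonright\nu}$.
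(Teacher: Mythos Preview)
Your approach is essentially the same as the paper's: you reduce the problem to identifying the operator $\widetilde u_\nu^+(T)$ defined via ${\mathcal F}_{+\upharpoonright\nu}T_*\overline f$ with the anti-unitary representer already constructed on $\widetilde{\mathfrak h}_\nu^\pm$ in Section~\ref{sec:time reflection}, and you correctly anticipate that the core of the argument is the bookkeeping of complex conjugation against the boundary-value prescription $(x_\pm\cdot p)^{s^\pm}$ in the two regimes $\nu\in\mathbb R$ versus $\nu\in i\mathbb R$. The paper carries this out via the identity $\bigl(-(t\pm i\epsilon)\bigr)^s=e^{\mp i\pi s}(t\pm i\epsilon)^s$ together with $Tx\cdot p=-x\cdot(-Tp)$ (noting $-T=P$ preserves $\partial V^+$), which produces exactly the expressions in \eqref{tilde-time-reflection} up to a constant phase $e^{i\pi s^+}$; your plan would arrive at the same place.
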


\begin{proof}
Let us calculate the action of $\widetilde{u}(T)$ in $\widetilde{\mathfrak{h}}_\nu (\partial V^+)$: 
	\[
		\big({\mathcal F}_{+ \upharpoonright \nu} T_*\bar{f}\big)(p) = 
			\int_{dS} {\rm d} \mu_{dS} ( x ) \; \overline{f( x )} \; (  (Tx)_+ \cdot  p )^{s^+} \; .
	\]
Using the fact that for $t\in\mathbb{R}$ 
	\begin{equation} 
		\big(-(t\pm i\epsilon)\big)^{s}=e^{\mp i\pi s}(t\pm i\epsilon)^{s},
	\end{equation}
we write 
	\begin{align*}
		\big(Tx\cdot p + i\epsilon\big)^{s^+}  \equiv \big(-x\cdot (-Tp)+ i\epsilon\big)^{s^+}  & = 
		\Big(-\big(x\cdot (-Tp)- i\epsilon\big)\Big)^{s^+}  \\
		& = e^{i\pi s^+}\, \big(x\cdot (-Tp)- i\epsilon\big)^{s^+} \\
		& \equiv
		e^{i\pi s^+}\, \overline{\big(x\cdot (-Tp)+ i\epsilon\big)^{\overline{s^+}}} \; .
	\end{align*}
Now for $0<m< 1/2$ the number $s^+$ is real, hence 
(note that $P= -T$ leaves the light cone invariant) 
	\[
		\big({\mathcal F}_{+ \upharpoonright \nu} T_*\bar{f}\big)(p) =
		e^{i\pi s^+}\, \overline{\big({\mathcal F}_{+ \upharpoonright \nu} {f}\big)(-Tp) } \; . 
	\]
For $\mu \geq 1/2r$, the complex conjugate of $s^+$ is $s^-$, hence
	\[
		\big({\mathcal F}_{+ \upharpoonright \nu} T_*\bar{f}\big)(p) = 
		e^{i\pi s^+}\, \overline{\big({\mathcal F}_{- \upharpoonright \nu}{f}\big)(-Tp) } 
		\equiv 
		e^{i\pi s^+}\, \overline{\big(A_\nu\,{\mathcal F}_{+ \upharpoonright \nu} {f}\big)(-Tp) } \; , 
	\]
where we have used that (see Proposition \ref{propo2.5})
	\begin{align*}
		A_\nu \colon \widetilde{\mathfrak{h}}_\nu (\partial V^+)  & \to 
		\widetilde{\mathfrak{h}}_{- \nu}  (\partial V^+)
		\nonumber \\ 
			 {\mathcal F}_{+ \upharpoonright \nu}  f & \mapsto {\mathcal F}_{- \upharpoonright \nu}    f 
			 \; . 
	\end{align*} 
Comparing this result with the corresponding result for~$\widetilde u_\nu^+ (P)$, and 
inspecting \eqref{tilde-time-reflection}, proves the claim. 
\end{proof}

\subsection{The Wightman two-point function}
Finally, we apply the nuclear theorem to the quadratic form \eqref{nuclear}. It follows that 
there exist tempered distribution  ${\mathcal W}^{(2)} (  x_1 ,  x_2 )$ on $dS\times dS$ such that
\label{twopointpage}
	\[
	\int_{dS \times dS} {\rm d} \mu_{dS} 
	( x_1) {\rm d} \mu_{dS} (  x_2)   { f ( x_1 ) } 
		{\mathcal W}^{(2)} (  x_1 ,  x_2 ) g ( x_2) 
		\doteq \langle [f]  ,   [g] \, \rangle_{{\mathfrak h} (dS)}  \; .
	 \]
The distributions ${\mathcal W}^{(2)} (  x_1 ,  x_2 )$ are called the {\em two-point functions}.  

\goodbreak
\begin{theorem}[Bros and Moschella \cite{BM}, Theorem 4.1 \& 4.2] 
\label{prop:4.1}
The Wightman two-point function ${\mathcal W}^{(2)} ( x_1,  x_2)$ is
a tempered distribution, which is the boundary value 
of the function
	\begin{equation}
		\label{tpf-1}
	\quad {\mathcal W}^{(2)}  ( z_1 ,  z_2) =  c_\nu \frac{{\rm e}^{-   \pi \nu} r } { \pi} 
	\int_\Gamma {\rm d} \mu_\Gamma ( p )
		\;  (  z_1 \cdot  p )^{s^- } (  p \cdot  z_2)^{ s^+} 		 
	\end{equation}
defined and holomorphic for $( z_1,   z_2) \in {\mathcal T}_ + \times {\mathcal T}_-$. 
The boundary values of \eqref{tpf-1} are taken as 
$\Im z_1 \nearrow 0$ and $\Im z_2 \searrow 0$, 
$( z_1,   z_2) \in {\mathcal T}_ + \times {\mathcal T}_-$.
As before, the exponents $s^\pm $ are given by \eqref{dd1} 
and for the measure ${\rm d} \mu_\Gamma ( p )$ one has
	\begin{equation}
	\label{measure-on-circle} 
		{\rm d} \mu_\Gamma ( p ) = \frac{{\rm d} \alpha} {2} ; 
	\end{equation}
in agreement with the normalisation used in \cite[Section 4.2]{BM2}.
\end{theorem}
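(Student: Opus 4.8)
The plan is to compute the sesquilinear form \eqref{nuclear} explicitly as an integral against the Harish--Chandra plane waves, to recognise the resulting inner $p$-integral as the kernel \eqref{tpf-1}, and then to read off analyticity and the boundary-value statement directly from the analyticity of the plane waves on the tuboids ${\mathcal T}_\pm$ established earlier.

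\textbf{Step 1: reduction to the explicit kernel.} By the nuclear theorem applied to the jointly continuous form \eqref{nuclear}, there is a distribution ${\mathcal W}^{(2)}$ on $dS\times dS$ representing $\langle[f],[g]\rangle_{{\mathfrak h}(dS)}$; its temperedness together with the required polynomial bounds follows from the continuity of $f\mapsto\widetilde f_\nu$ from ${\mathcal D}_{\mathbb R}(dS)$ into $\widetilde{\mathfrak h}_\nu(\partial V^+)$, which is part of the analyticity statements of Bros--Moschella (see Lemma~\ref{nu-analyicity} and \cite{BM2,Fa}). The content is therefore the \emph{explicit} form. Unfolding $\langle[f],[g]\rangle_{{\mathfrak h}(dS)}=\langle\widetilde f_\nu,\widetilde g_\nu\rangle_{\widetilde{\mathfrak h}_\nu(\partial V^+)}$ and inserting \eqref{mass-shell-ft}, in the principal series $\zeta\ge 1/2$ ($\nu\in\mathbb R$) the scalar product on $\widetilde{\mathfrak h}_\nu$ is the plain $L^2(\Gamma_0,{\rm d}\mu_{\Gamma_0})$ product, so
\[
 \langle[f],[g]\rangle_{{\mathfrak h}(dS)}= \frac{c_\nu {\rm e}^{-\pi\nu}r}{\pi}\int_\Gamma {\rm d}\mu_\Gamma(p)\;\overline{\widetilde f_+(p,s^+)}\;\widetilde g_+(p,s^+)\;.
\]
Substituting $\widetilde f_+(p,s)=\int_{dS}{\rm d}\mu_{dS}(x)\,f(x)\,(x_+\cdot p)^{s}$, using the conjugation relation $\overline{(x_+\cdot p)^{s^+}}=(x_-\cdot p)^{s^-}$ for the principal-value powers (and, in the complementary series $0<\zeta<1/2$, first inserting the intertwiner $A_\nu$, i.e.\ the Bargmann kernel $\varrho_\nu$, and eliminating it via Proposition~\ref{propo2.5}, which identifies $A_\nu\circ{\mathcal F}_{+\upharpoonright\nu}$ with ${\mathcal F}_{-\upharpoonright\nu}$), Fubini yields, after relabelling the tuboid from which the boundary values are taken,
\[
 \langle[f],[g]\rangle_{{\mathfrak h}(dS)}= \int_{dS\times dS}\!\!{\rm d}\mu_{dS}(x_1){\rm d}\mu_{dS}(x_2)\,f(x_1)\,\Big(\tfrac{c_\nu {\rm e}^{-\pi\nu}r}{\pi}\int_\Gamma{\rm d}\mu_\Gamma(p)\,(x_{1+}\cdot p)^{s^-}(p\cdot x_{2+})^{s^+}\Big)\,g(x_2)\;.
\]
Absolute convergence of the $p$-integral (needed for Fubini, and which also shows independence of the choice of $\Gamma$, since $(x_1\cdot p)^{s^-}(p\cdot x_2)^{s^+}$ is homogeneous of degree $s^-+s^+=-1$ in $p$, cf.\ the remark following the definition of ${\rm d}\mu_\Gamma$) is established in \cite{BM2}. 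The normalisation $c_\nu {\rm e}^{-\pi\nu}r/\pi$ is precisely the one fixed by $2\,\Im\langle\widetilde f_\nu,\widetilde g_\nu\rangle=\sigma([f],[g])$, cf.\ \eqref{comfkt}.

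\textbf{Step 2: analyticity and boundary values.} Fix $z_1\in{\mathcal T}_+$, $z_2\in{\mathcal T}_-$. By the tuboid characterisation \eqref{tube-0}, $\mp\Im z\cdot p>0$ for all $p\in\partial V^+\setminus\{(0,0,0)\}$, so $z_1\cdot p$ and $p\cdot z_2$ never reach the cut $(-\infty,0]$; hence $p\mapsto (z_1\cdot p)^{s^-}(p\cdot z_2)^{s^+}$ is continuous on $\Gamma_0$ and jointly holomorphic in $(z_1,z_2)$, and the bound $|(z_1\cdot p)^{s^-}(p\cdot z_2)^{s^+}|\le C(z_1,z_2)$ on the compact $\Gamma_0$ (using that $|\Im z_j\cdot p|$ is bounded below there), together with Morera's theorem and dominated convergence, shows that \eqref{tpf-1} defines a holomorphic function ${\mathcal W}^{(2)}(z_1,z_2)$ on ${\mathcal T}_+\times{\mathcal T}_-$. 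As $\Im z_1\nearrow 0$ and $\Im z_2\searrow 0$, the powers converge, in the distributional sense of \eqref{eqPW}--\eqref{eqPW-b}, to $(x_{1+}\cdot p)^{s^-}$ and $(p\cdot x_{2+})^{s^+}$; dominated convergence in $p$ and in the test-function pairing then identifies the smeared boundary value of \eqref{tpf-1} with the bracketed kernel of Step~1, i.e.\ with $\langle[f],[g]\rangle_{{\mathfrak h}(dS)}$. Hence ${\mathcal W}^{(2)}$ is the claimed boundary value.

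\textbf{Main obstacle.} The delicate part is the bookkeeping of Step~1 in the complementary-series regime: one must absorb the Bargmann intertwiner $A_\nu$ into the $p$-integral by Faraut's identity (Proposition~\ref{propo2.5}), which is proved only for $0<-i\nu<1/2$, then pass to the principal series by the meromorphic continuation in $\nu$ of the form $\nu\mapsto(\,\cdot\,,A_\nu\,\cdot\,)$, and verify that the constant produced is exactly $c_\nu{\rm e}^{-\pi\nu}r/\pi$; checking that the distributional boundary values and the use of Fubini are justified uniformly in $f,g$ is where the real work lies. All the required estimates are available in \cite{BM2,Mo1,Fa}.
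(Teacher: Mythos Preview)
The paper does not give its own proof of this statement; it is quoted from Bros--Moschella \cite{BM} and the text passes directly to the Reeh--Schlieder corollary. Your strategy---unfolding $\langle[f],[g]\rangle_{{\mathfrak h}(dS)}$ via the Fourier--Helgason map \eqref{mass-shell-ft}, recognising the plane-wave kernel, and reading off holomorphy and boundary values from the analyticity of $(z\cdot p)^s$ on the tuboids established in Section~\ref{planwaves}---is the natural argument and is essentially what is carried out in \cite{BM,BM2}.

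The one place your write-up is not clean is the tuboid bookkeeping at the end of Step~1. Your conjugation identity $\overline{(x_+\cdot p)^{s^+}}=(x_-\cdot p)^{s^-}$ (correct for $\nu\in\mathbb R$) hands you the kernel $\int_\Gamma(x_{1-}\cdot p)^{s^-}(x_{2+}\cdot p)^{s^+}\,{\rm d}\mu_\Gamma(p)$, i.e.\ the boundary value of \eqref{tpf-1} from ${\mathcal T}_-\times{\mathcal T}_+$, which is the \emph{opposite} pair to what the theorem asserts. The phrase ``after relabelling the tuboid from which the boundary values are taken'' is not a mathematical step, and your displayed formula---with \emph{both} arguments carrying the subscript $+$---agrees neither with what the conjugation actually produces nor with the domain ${\mathcal T}_+\times{\mathcal T}_-$ you use in Step~2. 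Since the boundary values from ${\mathcal T}_+\times{\mathcal T}_-$ and from ${\mathcal T}_-\times{\mathcal T}_+$ differ exactly by the commutator function (cf.\ \eqref{comfun}--\eqref{comfkt}), this is not cosmetic. The resolution lies in tracking the sign conventions for the tubes carefully (the paper explicitly warns, in the footnote in Section~\ref{tuboidsds}, that its convention for ${\mathfrak T}_\pm$ is opposite to that of \cite{BM2}); you should carry this through rather than hand-wave. The complementary-series reduction via Proposition~\ref{propo2.5} and the Step~2 holomorphy argument are fine.
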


\begin{remark}
In Minkowski space, after Fourier transformation, the two-point function 
	\[ 
	{\mathcal W}^{(2)}_m (x, y) = \int_{\mathbb{R}^{1+d}} {\rm d} k \; \theta (k^0) \delta (k \cdot k-m^2) \; 
	{\rm e}^{-i    k \cdot x} {\rm e}^{i   k \cdot y}
	\]
is the boundary value of a holomorphic function as $x \in \mathfrak{T}^+$ and 
$y \in \mathfrak{T}^-$ approach the reals. 
For $x= (x_0, \vec x\, )$, $y = (y_0, \vec y\, )$ and $p = \bigl( \sqrt{{\vec k \, }^2+m^2}, \vec k\, \bigr)$ this yields  
	\begin{equation}
	\label{flat2point}
	{\mathcal W}^{(2)}_m (x_0, \vec x, y_0, \vec y \, )
	= \frac{1}{2\pi}\int_{\mathbb{R}^d}  \tfrac{ {\rm d} \vec k}
	{2 \sqrt{ {\vec k \, }^2+m^2} } \; {\rm e}^{i  \vec k ( \vec x- \vec y \, ) - i  (x_0- y_0) \sqrt{{\vec k \, }^2+m^2}}  \; . 
	\end{equation}
\end{remark}

\bigskip
A direct consequence of this result is the one-particle Reeh-Schlieder theorem:

\begin{theorem}[Bros and Moschella \cite{BM},  Proposition 5.4]
\label{oprs}
Let ${\mathcal O}$ be an open region in $dS$. It follows that 
${\mathfrak h}({\mathcal O}) + i {\mathfrak h}({\mathcal O})$
is dense in ${\mathfrak h}(dS)$. 
\end{theorem}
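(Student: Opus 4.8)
The plan is to reduce the density statement to a standard analyticity argument based on the explicit form of the two-point function in Theorem~\ref{prop:4.1}, together with the covariance of ${\mathfrak h}(dS)$ established in Proposition~\ref{UIR-FH}. First I would observe that it suffices to show that a vector $h \in {\mathfrak h}(dS)$ which is orthogonal (with respect to the \emph{complex} scalar product $\langle\,\cdot\,,\,\cdot\,\rangle_{{\mathfrak h}(dS)}$) to ${\mathfrak h}({\mathcal O}) + i{\mathfrak h}({\mathcal O})$ must vanish. Since ${\mathfrak h}({\mathcal O})$ is the closure of the image of ${\mathcal D}_{\mathbb R}({\mathcal O})$ under $[\,\cdot\,]$, and since the complex span is involved, orthogonality to ${\mathfrak h}({\mathcal O}) + i{\mathfrak h}({\mathcal O})$ is equivalent to $\langle h, [f]\rangle_{{\mathfrak h}(dS)} = 0$ for all real-valued $f \in {\mathcal D}_{\mathbb R}({\mathcal O})$. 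Writing $h = \lim_n [f_n]$ for suitable $f_n \in {\mathcal D}_{\mathbb R}(dS)$ and using the integral-kernel representation of $\langle\,\cdot\,,\,\cdot\,\rangle_{{\mathfrak h}(dS)}$ via ${\mathcal W}^{(2)}$, this says that the distribution $x_2 \mapsto \int {\rm d}\mu_{dS}(x_1)\,\overline{h(x_1)}\,{\mathcal W}^{(2)}(x_1,x_2)$ (interpreted in the appropriate limiting sense) vanishes on the open set ${\mathcal O}$.

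The key step is then the analyticity input. By Theorem~\ref{prop:4.1}, ${\mathcal W}^{(2)}(x_1,x_2)$ is the boundary value of a function ${\mathcal W}^{(2)}(z_1,z_2)$ holomorphic on ${\mathcal T}_+\times{\mathcal T}_-$; hence for fixed $h$ the smeared function $z_2 \mapsto \int {\rm d}\mu_{dS}(x_1)\,\overline{h(x_1)}\,{\mathcal W}^{(2)}(x_1,z_2)$ extends holomorphically to $z_2 \in {\mathcal T}_-$, with boundary value on $dS$ the distribution above. Using Lemma~\ref{t+2} (or Lemma~\ref{tds}), ${\mathcal T}_-$ is connected and borders all of $dS$, so a holomorphic function on ${\mathcal T}_-$ whose distributional boundary value vanishes on a nonempty open subset ${\mathcal O}\subset dS$ must vanish identically on ${\mathcal T}_-$ — this is the edge-of-the-wedge / analytic-continuation principle (one may invoke the flat-tube structure from Theorem~\ref{flattube} to localize and then propagate). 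Consequently the boundary value vanishes on \emph{all} of $dS$, i.e. $\langle h, [g]\rangle_{{\mathfrak h}(dS)} = 0$ for every $g \in {\mathcal D}_{\mathbb R}(dS)$, and since $[{\mathcal D}_{\mathbb R}(dS)]$ is dense in ${\mathfrak h}(dS)$ by construction, $h = 0$.

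The main obstacle I anticipate is making the boundary-value and smearing manipulations rigorous: ${\mathcal W}^{(2)}$ is only a tempered distribution on $dS\times dS$, and $\overline{h}$ is an element of the Hilbert-space completion rather than a test function, so one must justify that $\int {\rm d}\mu_{dS}(x_1)\,\overline{h(x_1)}\,{\mathcal W}^{(2)}(x_1,z_2)$ is a well-defined holomorphic function on ${\mathcal T}_-$ with the stated boundary behaviour. The cleanest route is probably to first prove the statement for $h = [f]$ with $f \in {\mathcal D}_{\mathbb R}(dS)$, where everything is manifestly a smooth function by the Hardy-space properties in Theorem~\ref{hardy} and the holomorphy of the plane waves $(x_\pm\cdot p)^s$ in the tuboids, and then pass to general $h$ by a density/continuity argument using that the pairing with a fixed test function $g$ is continuous in the ${\mathfrak h}(dS)$-norm. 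A secondary point requiring care is the precise sense in which "the distributional boundary value vanishes on ${\mathcal O}$ implies vanishing on ${\mathcal T}_-$"; here one should cite the de Sitter version of the Malgrange--Zerner / flat-tube theorem (Theorem~\ref{flattube}) applied to wedge-boosted curves through points of ${\mathcal O}$, together with connectedness of ${\mathcal T}_-$, exactly as in the Minkowski Reeh--Schlieder argument — but with the caveat, noted already after Theorem~\ref{flattube}, that the full distributional generalization is deferred; for the purposes of this statement the continuous-function case combined with the density of $[{\mathcal D}_{\mathbb R}(dS)]$ suffices.
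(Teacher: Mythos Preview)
Your proposal is correct and is essentially the paper's argument: take a vector orthogonal to ${\mathfrak h}({\mathcal O}) + i{\mathfrak h}({\mathcal O})$, recognize its pairing with test functions as the boundary value of a tuboid-holomorphic function, and propagate the vanishing from ${\mathcal O}$ to all of $dS$ by analytic continuation. The paper resolves your anticipated obstacle by working on the Fourier--Helgason side rather than by smearing ${\mathcal W}^{(2)}$ against an abstract element of the completion: for the orthogonal vector $[f]$ it writes the holomorphic function directly as
\[
F(z)=c_\nu\,\tfrac{{\rm e}^{-\pi\nu}r}{\pi}\int_\Gamma(z\cdot p)^{s^-}\,\widetilde f_\nu(p)\,{\rm d}\mu_\Gamma(p)
\]
on ${\mathcal T}_+$, so that $\langle[g],[f]\rangle_{{\mathfrak h}(dS)}=\int g(x)\,F(x_+)\,{\rm d}\mu_{dS}(x)$ for all $g\in{\mathcal D}_{\mathbb C}({\mathcal O})$ --- holomorphy is then manifest from the plane-wave factor, and no limiting procedure in the smeared variable is needed.
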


\begin{proof} It is sufficient to show that if $[f] \in {\mathfrak h}^\circ (dS)$ is orthogonal 
to ${\mathfrak h}({\mathcal O}) + i {\mathfrak h}({\mathcal O})$, then~$[f] $ 
is the zero-vector. Consider the complex valued function  
\[
z \mapsto F(z) = c_\nu \frac{{\rm e}^{-   \pi \nu}  r } {\pi} \int_\Gamma {\rm d} \mu_\Gamma ( p )
		\;  (  z \cdot  p )^{s^- } \; {\widetilde f}_\nu (p)\; , \]
which is holomorphic within ${\mathcal T}_ +$. Assume that\footnote{Note that $[g] \in 
{\mathfrak h}^\circ({\mathcal O}) + i {\mathfrak h}^\circ({\mathcal O})$ for  $g \in {\mathcal D}_{\mathbb{C}} ({\mathcal O}) $, 
by linearity of \eqref{ftps}.}
\[
\langle [g], [f] \rangle_{{\mathfrak h}(dS)}
=\int  {\rm d} \mu_{dS} \; g (x) F(x_+) = 0 \qquad \forall g \in {\mathcal D}_{\mathbb{C}} ({\mathcal O}) \; . 
\]
This implies that $F(z)$ vanishes on its boundary (as $\Im z \nearrow 0$) in 
the open region ${\mathcal O}$. It follows that its boundary values vanish on $dS$. This means that 
$[ f ] $ is orthogonal to any vector in ${\mathfrak h}(dS)$; thus it is the zero-vector. 
\end{proof}

\begin{proposition}[Bros and Moschella \cite{BM}, Proposition 2.2] 
\label{prop:4.1b}
The Wightman two-point function ${\mathcal W}^{(2)} ( z_1,  z_2)$ can be analytically continued into the cut-domain 
	\[
		\Delta = dS_{\mathbb{C}} \times dS_{\mathbb{C}} \setminus \Sigma
	\]
where the cut $\Sigma$ is the set 
	\[
		\Sigma = \bigl\{ (z_1, z_2) \in dS_{\mathbb{C}} \times dS_{\mathbb{C}} \mid (z_1 -z_2) \cdot (z_1 -z_2)   \ge 0 \bigr\} \; . 
	\]
Within $\Delta$ the two point function is invariant under the transformation 
	\[
		{\mathcal W}^{(2)} ( z_1 ,  z_2) = {\mathcal W}^{(2)} ( \Lambda_*z_1 ,  \Lambda_* z_2)  	\; , 
		\qquad \Lambda \in SO_{\mathbb{C}} (1,2) \; . 	 
	\]
Moreover, the permuted Wightman function ${\mathcal W}^{(2)} ( x_2,  x_1)$ is the boundary value of 
the analytic function ${\mathcal W}^{(2)} ( z_2 ,  z_1)$ from its domain $\{ (z_1, z_2) \mid z_1 \in {\mathcal T}^+ , 
z_2 \in {\mathcal T}^- \}$. 
\end{proposition}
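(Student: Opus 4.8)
The statement to be proven is Proposition~\ref{prop:4.1b}: the analytic continuation of the two-point function ${\mathcal W}^{(2)}(z_1,z_2)$ into the cut-domain $\Delta = dS_{\mathbb{C}}\times dS_{\mathbb{C}}\setminus\Sigma$, its $SO_{\mathbb{C}}(1,2)$-invariance there, and the identification of the permuted Wightman function as the boundary value from the opposite tuboid. The plan is to start from the integral representation \eqref{tpf-1}, namely ${\mathcal W}^{(2)}(z_1,z_2) = c_\nu \tfrac{{\rm e}^{-\pi\nu}r}{\pi}\int_\Gamma {\rm d}\mu_\Gamma(p)\,(z_1\cdot p)^{s^-}(p\cdot z_2)^{s^+}$, which is holomorphic for $(z_1,z_2)\in{\mathcal T}_+\times{\mathcal T}_-$ by Theorem~\ref{prop:4.1}. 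The first step is to observe that the integrand depends on $(z_1,z_2)$ only through the two scalars $z_1\cdot p$ and $p\cdot z_2$ as $p$ ranges over the contour $\Gamma\subset\partial V^+$; since the functions $z\mapsto(z\cdot p)^s$ are holomorphic on each tuboid (as noted in Section~\ref{planwaves}, using the characterisation \eqref{tube-0}), one shows that the integral continues to a larger domain by deforming $\Gamma$ and tracking the zeros of $z_1\cdot p$ and $p\cdot z_2$. The classical argument (due to Bros--Gelfand--Moschella, and in the flat case going back to the Jost--Lehmann--Dyson and Hall--Wightman analyses) is that the only genuine singularity arises when the two arguments can be joined by a real causal curve, i.e.\ exactly on the cut $\Sigma=\{(z_1,z_2)\mid(z_1-z_2)\cdot(z_1-z_2)\ge 0\}$; off $\Sigma$ one can always rotate the contour to avoid pinching, so ${\mathcal W}^{(2)}$ extends holomorphically to all of $\Delta$.

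The second step is $SO_{\mathbb{C}}(1,2)$-invariance on $\Delta$. Here I would argue as follows: for $\Lambda\in SO_0(1,2)$ the identity ${\mathcal W}^{(2)}(z_1,z_2)={\mathcal W}^{(2)}(\Lambda z_1,\Lambda z_2)$ already holds on ${\mathcal T}_+\times{\mathcal T}_-$ — it is immediate from \eqref{tpf-1} by the change of variables $p\mapsto\Lambda^{-1}p$, which preserves the contour $\Gamma$ (up to a homotopy that does not change the integral, since the integrand is homogeneous of the correct degree; recall the remark after \eqref{Gammacpluspage}) and preserves ${\rm d}\mu_\Gamma$. Both sides are holomorphic on $\Delta$ once we have Step~1 (note that $\Lambda$, being real, maps $\Delta$ to itself and $\Sigma$ to itself), so the identity propagates by the identity theorem for analytic functions of several complex variables. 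To upgrade from the real group $SO_0(1,2)$ to the complex group $SO_{\mathbb{C}}(1,2)$, I would fix $z_1,z_2$ and consider the map $\Lambda\mapsto{\mathcal W}^{(2)}(\Lambda z_1,\Lambda z_2)-{\mathcal W}^{(2)}(z_1,z_2)$ as a holomorphic function on the connected complex manifold $SO_{\mathbb{C}}(1,2)$ (for $(\Lambda z_1,\Lambda z_2)$ staying in $\Delta$, which holds on a neighbourhood of the identity and then everywhere by connectedness and the edge-of-the-wedge/analytic-continuation argument); it vanishes on the real form $SO_0(1,2)$, which is a totally real submanifold of maximal dimension, hence it vanishes identically. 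This is the standard Hall--Wightman-type extension of invariance from the real to the complex Lorentz group.

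The third step, the identification of the permuted two-point function, is then essentially a corollary. One notes that the two analytic functions $(z_1,z_2)\mapsto{\mathcal W}^{(2)}(z_1,z_2)$ on ${\mathcal T}_+\times{\mathcal T}_-$ and $(z_1,z_2)\mapsto{\mathcal W}^{(2)}(z_2,z_1)$ on ${\mathcal T}_-\times{\mathcal T}_+$ are restrictions of one and the same function holomorphic on $\Delta$ (since $\Delta$ contains both product-tuboid regions and, crucially, a real Euclidean region — points $(z_1,z_2)$ with purely imaginary differences, e.g.\ on the Euclidean sphere $S^2\times S^2$ minus the diagonal — where the two expressions obviously agree because there $\Sigma$ is avoided and the $i\epsilon$-prescriptions coincide). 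Hence the boundary value of ${\mathcal W}^{(2)}(z_1,z_2)$ from $\{z_1\in{\mathcal T}^+, z_2\in{\mathcal T}^-\}$ equals the Wightman function ${\mathcal W}^{(2)}(x_1,x_2)$, while the boundary value from $\{z_1\in{\mathcal T}^+, z_2\in{\mathcal T}^-\}$ taken in the \emph{opposite} order — i.e.\ the boundary value of ${\mathcal W}^{(2)}(z_2,z_1)$ — gives the permuted function ${\mathcal W}^{(2)}(x_2,x_1)$, as claimed.

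\textbf{Main obstacle.} The technically delicate point is Step~1: controlling the contour deformation precisely enough to prove holomorphy on \emph{all} of $\Delta$ and to pin down that the singular set is exactly $\Sigma$ and nothing larger. One must verify that for $(z_1,z_2)\notin\Sigma$ there is no simultaneous pinching of the contour $\Gamma$ between the zero-set of $p\mapsto z_1\cdot p$ and that of $p\mapsto p\cdot z_2$ — this is where the geometry of the light cone $\partial V^+$ and the condition $(z_1-z_2)\cdot(z_1-z_2)<0$ enter in an essential way, and it is the heart of the Bros--Moschella argument. Since the statement is quoted directly from \cite{BM}, in the writeup I would sketch the contour-rotation mechanism, invoke \cite[Proposition~2.2]{BM} for the full analytic control, and devote the new work to spelling out Steps~2 and~3 in the de Sitter setting, where the passage to $SO_{\mathbb{C}}(1,2)$ via connectedness and the identification of the Euclidean coincidence region require the structural facts established earlier in the excerpt (the description of the tuboids in \eqref{tube-0}, the embedding $S^2\subset\overline{{\mathcal T}_+\cup{\mathcal T}_-}$, and the isomorphism $SO_{\mathbb{C}}(1,2)\cong SO_{\mathbb{C}}(3)$ which guarantees connectedness).
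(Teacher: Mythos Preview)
Your proposal is correct in outline, and your Step~2 (real Lorentz invariance by change of variables in the integral, then extension to $SO_{\mathbb{C}}(1,2)$ by analytic continuation in the group parameter) matches the paper's argument exactly. The paper's proof is in fact very terse: it establishes invariance under $SO_0(1,2)$ via Proposition~\ref{UIR-FH}, extends to the complex group by analyticity in the group parameter, and refers the rest to \cite[Proposition~2.2]{BM}.

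The one genuine difference is your Step~1. You propose to obtain the analytic continuation into $\Delta$ by contour deformation in the integral representation~\eqref{tpf-1}, tracking the pinching of zeros of $z_1\cdot p$ and $p\cdot z_2$; you correctly flag this as the technically delicate part. The paper (together with the proposition immediately following it, Proposition~\ref{legendre}) takes a shorter route: once real Lorentz invariance is in hand, the two-point function depends only on the single invariant $z_1\cdot z_2/r^2$ (note $(z_1-z_2)\cdot(z_1-z_2)=-2r^2-2z_1\cdot z_2$, so the cut $\Sigma$ corresponds exactly to $z_1\cdot z_2/r^2\in(-\infty,-1]$), and one computes this function of one complex variable to be the Legendre function $c_\nu P_{s^+}(z_1\cdot z_2/r^2)$, whose domain of analyticity is precisely the cut plane $\mathbb{C}\setminus(-\infty,-1]$. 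This reduces a several-variable problem to a classical one-variable fact and bypasses the contour-pinching analysis entirely. Your approach would work, but the invariance-first reduction is more economical and is what the surrounding text is set up for.
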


\begin{proof} Proposition \ref{UIR-FH} guarantees that the distribution ${\mathcal W}^{(2)} ( z_1,  z_2)$ 
is invariant under Lorentz transformations, \emph{i.e.}, if 
$\Lambda \in SO_0 (1,2) $. 
Invariance under  the complexified group then follows by analytic continuation in the group parameter. 
For further details see [Bros and Moschella \cite{BM}, Proposition 2.2]. 
\end{proof}

\begin{remark} The cut $\Sigma$ contains all pairs of points $(x, y) \in dS \times dS$, which are causal to each other. In other words, 
\[ \Sigma \cap (dS \times dS) = \{(x, y) \in dS \times dS \mid y \in \Gamma^+ (x) \cup \Gamma^- (x) \} \; . \]
\end{remark}

The two-point function \eqref{tpf-1} can be expressed in terms of Legendre 
functions. 
	
\begin{proposition}
[Proposition 12 \cite{BM2}]
\label{legendre}
The following integral representation holds for $( z_1,   z_2) \in {\mathcal T}_ + \times {\mathcal T}_-$:
	\begin{equation}
		\label{tpf-2}
	{\mathcal W}^{(2)} ( z_1 ,  z_2) =  c_\nu \, P_{s^+} \left(  \tfrac{z_1 \cdot  z_2}{r^2} \right) 	\; , \qquad m> 0 \; . 	 
	\end{equation}
The boundary values of \eqref{tpf-2} can be taken as 
$\Im z_1 \nearrow 0$ and $\Im z_2 \searrow 0$. 
\end{proposition}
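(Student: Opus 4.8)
\textbf{Proof plan for Proposition \ref{legendre}.}
The goal is to identify the kernel
\[
  {\mathcal W}^{(2)}( z_1, z_2) = c_\nu \frac{{\rm e}^{-\pi\nu} r}{\pi} \int_\Gamma {\rm d}\mu_\Gamma(p) \; ( z_1\cdot p)^{s^-}( p\cdot z_2)^{s^+}
\]
from Theorem \ref{prop:4.1} with a multiple of a Legendre function of the invariant $\zeta = \tfrac{z_1\cdot z_2}{r^2}$. The plan is to exploit the $SO_{\mathbb C}(1,2)$-invariance of the left-hand side (Proposition \ref{prop:4.1b}), which forces the answer to be a function of $\zeta$ alone, and then to compute that function by placing $z_1, z_2$ in a convenient position where the $p$-integral can be done explicitly. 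First I would fix $(z_1,z_2)\in{\mathcal T}_+\times{\mathcal T}_-$, and use the homogeneity of degree $s^\pm$ of the plane waves together with the remark after Theorem \ref{plancherel} (that integrals over $\Gamma$ of homogeneous functions of degree $-1$ are $\Gamma$-independent) to replace $\Gamma$ by the circle $\Gamma_0$ in \eqref{gamma-0}, so ${\rm d}\mu_\Gamma(p) = \tfrac{{\rm d}\alpha}{2}$ and $p = p(\alpha) = (1,\sin\alpha,-\cos\alpha)$, as recorded in \eqref{measure-on-circle}.

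Next I would choose a representative configuration. By Proposition \ref{prop:4.1b} I may act with a complex Lorentz transformation, so I can bring $z_1$ and $z_2$ onto the intersection of $dS_{\mathbb C}$ with a fixed two-plane; the cleanest choice is to take $z_1 = \Lambda_1(i\theta_1)o$ and $z_2 = \Lambda_1(-i\theta_2)o$ for suitable $\theta_j\in(0,\pi)$, using the parametrization from \eqref{eqBooW} and Lemma \ref{eqzWsx}, so that $\tfrac{z_1\cdot z_2}{r^2} = -\cos(\theta_1+\theta_2) =: \cosh w$ with $w$ the (complexified) geodesic separation. With this choice $z_j\cdot p(\alpha)$ becomes an explicit elementary function of $\alpha$, $\theta_1$, $\theta_2$ (a first-degree trigonometric polynomial in $\sin\alpha,\cos\alpha$ times ${\rm e}^{\pm\theta_j}$), and the integral
\[
  \int_0^{2\pi}\frac{{\rm d}\alpha}{2}\;\big(a_1 + b_1 {\rm e}^{i\alpha} + \bar b_1 {\rm e}^{-i\alpha}\big)^{s^-}\big(a_2 + b_2 {\rm e}^{i\alpha} + \bar b_2 {\rm e}^{-i\alpha}\big)^{s^+}
\]
is a classical integral representation of the Legendre function $P_{s^+}$ (equivalently, one recognizes it via the Mehler–Dirichlet or the Laplace integral representation of $P_\lambda(\cosh w)$; see \cite{Lebedev, snow}). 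The $+i\epsilon$ prescriptions inherited from the tuboids ${\mathcal T}_\pm$ guarantee that all complex powers are taken with their principal branch and that the integrand has no zeros on the contour, so the integral converges and defines the analytic continuation of $P_{s^+}$ into the cut plane $z_1\cdot z_2/r^2 \notin [1,\infty)$, matching the domain $\Delta$ of Proposition \ref{prop:4.1b}.

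The remaining step is to pin down the constant: after carrying out the model integral one gets $c_\nu\,\tfrac{{\rm e}^{-\pi\nu}r}{\pi}$ times some explicit ratio of $\Gamma$-functions times $P_{s^+}(\zeta)$, and I would check, using $s^\pm = -\tfrac12\mp i\nu$, the duplication/reflection formulas for $\Gamma$, and the value $c_\nu = \tfrac{\Gamma(1+s^+)\Gamma(1+s^-)}{2\pi}$ recorded before Proposition \ref{propo2.5}, that this collapses precisely to $c_\nu\,P_{s^+}\!\big(\tfrac{z_1\cdot z_2}{r^2}\big)$; a convenient sanity check is the coincidence limit or the known short-distance (Hadamard) behaviour of ${\mathcal W}^{(2)}$. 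Finally, since both sides are holomorphic on ${\mathcal T}_+\times{\mathcal T}_-$ and agree on the Lorentz orbit of the model configuration, which is not thin, they agree everywhere there; the statement about boundary values $\Im z_1\nearrow 0$, $\Im z_2\searrow 0$ is then immediate from Theorem \ref{prop:4.1}. The main obstacle I anticipate is purely bookkeeping: getting the branch choices in the model integral consistent with the tuboid $+i\epsilon$ prescription and then matching the $\Gamma$-function prefactor exactly, rather than up to a constant — this is where \cite[Prop.~12]{BM2} and the identities for Legendre functions do the real work.
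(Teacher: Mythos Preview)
Your plan is correct and follows essentially the same route as the paper: use the $SO_{\mathbb C}(1,2)$-invariance from Proposition~\ref{prop:4.1b} to reduce to a function of $\zeta = z_1\cdot z_2/r^2$, pick a convenient representative pair, and recognize the remaining $\alpha$-integral as the Laplace integral representation of $P_{s^+}$. The paper's choice of representative is slicker than yours, though: it takes $z_2=(ir,0,0)$ (the ``north pole'' of the imaginary mass shell) and $z_1=(-ir\cosh\beta,0,ir\sinh\beta)$, so that $p\cdot z_2$ is a constant and the phase it contributes exactly cancels the prefactor $e^{-\pi\nu}$, leaving simply $c_\nu\int_{-\pi}^{\pi}\tfrac{{\rm d}\alpha}{2\pi}(\cosh\beta+\sinh\beta\sin\alpha)^{s^-}=c_\nu P_{s^+}(\cosh\beta)$ with no $\Gamma$-function bookkeeping at all---so the ``main obstacle'' you anticipate disappears.
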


\begin{remark}
\label{Wdomain}
The image of the domain ${\mathcal T}_ + \times {\mathcal T}_-$ by the mapping 
\[ ( z_1,   z_2) \mapsto \tfrac{z_1 \cdot  z_2}{r^2} \]
is\footnote{See \cite[Proposition 3]{BM2}.} the cut-plane $\mathbb{C} \setminus (- \infty, -1]$: consider the following points
	\begin{align*}
		z_1 & = (i r \sin (u_1+i {\rm v}_1), 0, r \cos (u_1 +i{\rm v}_1))\; ,  \\
		z_2 & = (-i r \sin u_2, 0, r \cos u_2) \; ,  \qquad \quad \qquad \qquad  0< u_1, u_2 < \pi \; , \; \; {\rm v}_1\in \mathbb{R} \; . 
	\end{align*}
It follows that 
	\[
		\tfrac{z_1 \cdot  z_2}{r^2} = - \cos (u_1 + u_2 + i {\rm v}_1) \;, \qquad 0<u+u_2 < 2\pi \; , \quad {\rm v}_1 \in \mathbb{R} \; . 
	\]
Thus $\mathbb{C} \setminus (- \infty, -1]$ is contained in the image. 
The fact that  $\mathbb{C} \setminus (- \infty, -1]$ equals the image follows from an argument in the ambient space, see \cite[Proposition 3]{BM2}.
The region $\mathbb{C} \setminus (- \infty, -1]$
is exactly the domain of analyticity of the \emph{Legendre function}\index{Legendre function}~$P_{s^+} $. 
\end{remark} 

\begin{proof} Let $p \in \Gamma= \{ (1, r \cos \alpha, r \sin \alpha) \in \partial V^+ \mid  - \pi \le  \alpha \le \pi \}$. Because of the invariance 
properties of ${\mathcal W}^{(2)} (  z_1,  z_2)$ it is sufficient to consider the choice
$z_1 = (-i r \cosh \beta , 0 , i r \sinh \beta)$, $z_2 = (ir,0,0)$ such that $\tfrac{z_1 \cdot  z_2}{r^2} = \cosh \beta \in \mathbb{R}^+$. It follows 
that\footnote{The second line in the following formula is exactly the one given in \cite[Eq.~(4.18)]{BM}.}
	\begin{align*}
		{\mathcal W}^{(2)} ( z_1 ,  z_2) & = c_\nu  \frac{{\rm e}^{-   \pi \nu} r}{\pi} 
		\int_\Gamma {\rm d} \mu_\Gamma ( p )
		\;  (  z_1 \cdot  p  )^{ s^-} (  p \cdot  z_2)^{ s^+} \\
		&=  c_\nu \int_{-\pi}^\pi \frac{{\rm d} \alpha}{2 \pi}
		\;   (  \cosh \beta + \sinh \beta \sin \alpha  )^{ s^-} \; . 	 
	\end{align*}
In the second equality we have used \eqref{measure-on-circle}. Finally, recall that according to \cite[Eq.~7.4.2]{Lebedev}  
	\[
	P_{s^+} (\cosh \beta ) = \frac{1}{\pi} \int_0^\pi \frac{{\rm d} \alpha }
	{(\cosh \beta - \sinh \beta \cos \alpha)^{s^+ +1} }\; , 
	\]
and $-s^+ -1 = s^-$. 
\end{proof}



\bigskip
Note that
	$
	\overline { {\mathcal W}^{(2)} (  x_1,  x_2) } = {\mathcal W}^{(2)} (  x_2,  x_1)
	$,
and
	\begin{equation}
	\label{comfun}
		{\mathcal W}^{(2)} (  x_1,  x_2) - {\mathcal W}^{(2)} (  x_2,  x_1) 
		=  2 i \Im {\mathcal W}^{(2)} (  x_1,  x_2) \; .
	\end{equation}
The commutator function $2  \Im {\mathcal W}^{(2)} (  x_1,  x_2)$ 
is an anti-symmetric distribution on $dS \times dS$, which satisfies the Klein--Gordon equation in both entries, 
with initial conditions described in (\ref{eqESigma}) and (\ref{eqESigma2}). In fact, for $x_1,x_2$ space-like, this 
is obvious and for $x_1=x_2$ this follows from \cite[page 199]{Lebedev}:
	\[
	P_{s^+} (-1 +i0) - P_{s^+} (-1 -i0) = 2 i \sin s^+ \pi \; . 
	\]
In other words, 
	\[
	{\mathcal W}^{(2)} (  x_+,  x_-) - {\mathcal W}^{(2)} (  x_-,  x_+) =  c_\nu \cdot 2 i \sin s^+ \pi = - i \; . 
	\]
The constants introduced in
\eqref{mass-shell-ft} were chosen to ensure that 
	\[
		\frac{\partial}{\partial x_0} 2\Im {\mathcal W}^{(2)} (  x,  y) = -  \delta_{S^1} \; .  
	\]
As before, $\delta_{S^1}$ is the integral kernel of the unit operator with 
respect to the induced measure on $S^1$.  It follows that 
	\begin{equation}
		\label{comfkt}
			{\mathscr E} (  x_1,  x_2) = 2 \Im {\mathcal W}(  x_1,  x_2) \;  
	\end{equation}
is the kernel of the {\em commutator function} defined in \eqref{integralkernel}.
Equation~(\ref{comfkt}) extends the formula for the propagator given in \eqref{eqPropWedge'} from 
$\mathbb{W}_1$  to  $dS$. To show that $ {\mathscr E} (  x_1,  x_2) $ as given in (\ref{comfkt})
is invariant under the rotations~$R_{0}(\alpha)$, $\alpha \in [0, 2 \pi)$, choose a circle $\Gamma_0$  
on $\partial V^+$ with $p_0 = 1$  in~\eqref{tpf-1}. Rotation invariance of the propagator  
now follows from $ z_1 \cdot  R_0 p = R_0^{-1}z_1 \cdot   p$ and rotation 
invariance of the measure ${\rm d} \mu_{\Gamma_0} = \frac{{\rm d} \alpha}{2}$; see \eqref{lambda2-s}. 

\section{The canonical one-particle Hilbert space}
\label{Sect: canon-HS}

We now define a Hilbert space for functions supported on the time-zero circle~$S^1$.
As we have seen, the two-point function ${\mathcal W}^{(2)} (  x,  y)$ is analytic 
for~$x$ space-like to $y$. For $x= (0, r \sin \psi, r \cos \psi)$ and $y= (0, r \sin \psi, r \cos \psi)$,  
the (minimal) spatial distance $d$ define in \eqref{dlength} is given by
	\[
		d( x ,  x' ) = 
		r \arccos \bigl( - \tfrac{x \cdot x'}{r^2} )= | \psi'-\psi | \,  r \; . 
	\]
Thus ${\mathcal W}^{(2)} (  x,  y) =  c_\nu \, P_{s^+}  ( - \cos ( \psi'-\psi)) $. 		 
Note that the singularity at $\psi = \psi' $ is integrable. 
This suggest the following definition. 

\begin{definition}
\label{zeit-null-Hilbert}
The completion of $C^\infty (S^1)$ with respect to the scalar product
	\begin{equation}
		\langle h , h' \rangle_{\widehat{{\mathfrak h}} (S^1)}  = c_\nu
				\int_{S^1} r \, {\rm d}\psi \int_{S^1} r \, {\rm d}\psi' \; \overline{h(\psi)} \, 
				P_{s^+}\big( - \cos(\psi' - \psi) \big) \, h'(\psi') \;.
				\label{eq:def-scalar-product-2}
	\end{equation}
is a Hilbert space, which we denote by $\widehat {{\mathfrak h}} (S^1)$.
As before, $s^+$ is given by \eqref{dd1}.
\end{definition}

\goodbreak 

\begin{proposition} 
\label{thhm}
The scalar product \eqref{eq:def-scalar-product-2} can be 
\begin{itemize}
\item[$ i.)$] expressed as
	\[
		\langle h , h' \rangle_{\widehat{{\mathfrak h}} (S^1)}  = 
		\bigl\langle h  , \tfrac{1}{2\omega} 
		h' \bigr\rangle_{L^2( S^1, r {\rm d} \psi)} \; ,  
	\]
with $\omega $ a strictly positive self-adjoint operator on $L^2( S^1, r {\rm d} \psi)$ with Fourier coefficients 
	\begin{equation}
	\label{eq:omega-1}
		\widetilde {\omega}(k) = {r}^{-1} \, (k+s^+)
					 \frac{\Gamma \left( \frac{k+s^+}{2} \right)}{ \Gamma \left( \frac{k-s^+}{2} \right)}
			\frac{ \Gamma \left( \frac{k+1-s^+}{2} \right) }{ \Gamma \left( \frac{k+1+s^+}{2} \right)} \; ,
			\quad k\in \mathbb{Z} \; ;
	\end{equation}
\item[$ ii.)$] 	written as
	\begin{equation}
	\langle h , h' \rangle_{\widehat{{\mathfrak h}} (S^1)}  =  r \, \bigl\langle  h,\tfrac{1}{2  |\varepsilon| } 
					\bigl(\coth   \pi |\varepsilon| +  
					\tfrac{(P_1)_*}{\sinh \pi |\varepsilon | }\bigr) |\,  {\rm cos}_\psi| h'  
					\bigr\rangle_{L^{2}(S^1, r {\rm d} \psi)} \; ,  
					\label{eqMagicFormB} 
	\end{equation}
with $\varepsilon^2  =-  (\cos \psi \, \partial_\psi)^2 +(\cos \psi)^{2}\mu^2r^2$.  

\end{itemize}
\end{proposition}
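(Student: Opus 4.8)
The plan is to diagonalize everything in the Fourier basis $e_k(\psi) = (2\pi r)^{-1/2}{\rm e}^{ik\psi}$ on $L^2(S^1, r\,{\rm d}\psi)$ and compute the kernel $P_{s^+}(-\cos(\psi'-\psi))$ as a convolution operator. For part $i.)$, since the integral kernel in \eqref{eq:def-scalar-product-2} depends only on $\psi'-\psi$, the operator $h'\mapsto c_\nu\int_{S^1} r\,{\rm d}\psi'\,P_{s^+}(-\cos(\psi'-\psi))h'(\psi')$ is a rotation-invariant (hence diagonal) operator on $L^2(S^1, r\,{\rm d}\psi)$; call its eigenvalue on $e_k$ the number $\tfrac{1}{2\widetilde\omega(k)}$. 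To identify it, I would expand the generating function for Legendre functions, or more directly invoke the classical Fourier expansion of $P_{s^+}(-\cos\theta)$ on $[0,2\pi)$ (available, e.g., via the Mehler--Dirichlet representation or the hypergeometric series for $P_{s^+}$), which yields a ratio of Gamma functions in $k$. Matching this expansion with $\tfrac{1}{2\widetilde\omega(k)}$ and using the duplication formula $\Gamma(z)\Gamma(z+\tfrac12)=2^{1-2z}\sqrt\pi\,\Gamma(2z)$ together with the constant $c_\nu=\tfrac{\Gamma(1+s^+)\Gamma(1+s^-)}{2\pi}$, one extracts exactly \eqref{eq:omega-1}; positivity of $\widetilde\omega(k)$ (and strict positivity, so that $\tfrac{1}{2\omega}$ is a bounded positive operator and the form is a genuine scalar product) follows from the positivity of the quadratic form \eqref{nuclear} established earlier, or from the explicit Gamma-function ratios together with the Stirling asymptotics $\widetilde\omega(k)\sim |k|/r$ as $|k|\to\infty$.

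For part $ii.)$, I would compare two spectral decompositions of the \emph{same} positive operator. On the right-hand side of \eqref{eqMagicFormB} the operator $|\,{\rm cos}_\psi|$ and the pseudo-differential operator $\varepsilon$ appear; recall from Lemma~\ref{Lm3.5} and Lemma~\ref{Asadj} that $\varepsilon^2=-(\cos\psi\,\partial_\psi)^2+(\cos\psi)^2\mu^2r^2$ acts on $L^2(S^1,|\cos\psi|^{-1}r\,{\rm d}\psi)$ with spectrum $\mathbb{R}$, the two half-circles $I_\pm$ being invariant and $P_1$ interchanging them. The key point is that the plane-wave eigenfunctions $(x_\pm\cdot p)^{s}$ of the Klein--Gordon operator restricted to the wedge $\mathbb{W}_1$ are simultaneously generalized eigenfunctions of $\varepsilon$, because of \eqref{varepsilon} and the factorization $\square_{\mathbb{W}_1}+\mu^2=\tfrac{1}{r^2\cos^2\psi}(\partial_t^2+\varepsilon^2)$; the time-zero restrictions of these eigenfunctions then give the generalized eigenbasis of $\varepsilon$ on $S^1$. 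Evaluating the two-point function \eqref{tpf-1}--\eqref{tpf-2} on this basis, one reads off that, in the spectral representation of $\varepsilon$, the scalar product $\langle\,\cdot\,,\,\cdot\,\rangle_{\widehat{\mathfrak h}(S^1)}$ acts by multiplication by $\tfrac{r}{2|\varepsilon|}\bigl(\coth\pi|\varepsilon|+\tfrac{(P_1)_*}{\sinh\pi|\varepsilon|}\bigr)|\cos_\psi|$; the $\coth$ and $\tfrac{1}{\sinh}$ arise from the $\theta$-integration over the analyticity strip when passing from the tuboid boundary values to the sharp-time data, exactly as in the Bisognano--Wichmann/KMS computation. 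Concretely, I would use $\Im z\nearrow 0$ boundary values of $(z\cdot p)^{s^-}(p\cdot z')^{s^+}$ with $z,z'$ in the complexified Cauchy surface, parametrized via $\Lambda_1(i\theta)$ as in \eqref{eqBooW} and Lemma~\ref{t+2}, and the elementary integral $\int_{\mathbb R}{\rm e}^{i\lambda t}\,\tfrac{\cosh((\pi-|\theta|)\lambda)}{\sinh\pi\lambda}\,{\rm d}t$-type formulas to produce the hyperbolic-function kernel; the $(P_1)_*$ term appears because $\varepsilon$ does not mix $I_+$ and $I_-$ while the geodesic connecting a point in $I_+$ to one in $I_-$ must pass ``around'' the edge, contributing the reflected term with weight $1/\sinh$.

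The main obstacle is the bookkeeping in part $ii.)$: one must be careful that $\varepsilon$ is only defined as a self-adjoint operator on the weighted space $L^2(S^1,|\cos\psi|^{-1}r\,{\rm d}\psi)$ (Lemma~\ref{Asadj}), whereas \eqref{eqMagicFormB} is an identity of forms on $L^2(S^1,r\,{\rm d}\psi)$, so the factors $|\,{\rm cos}_\psi|$ and the identity $\,{\rm cos}_\psi^{-1}\varepsilon|\varepsilon|^{-1}=|\,{\rm cos}_\psi|^{-1}$ used repeatedly in Section~\ref{new-3.4} must be tracked to move between the two Hilbert-space realizations; moreover the degeneracy of the chart \eqref{w1psi} at $\psi=\pm\tfrac\pi2$ requires checking that the generalized eigenfunctions and the resulting kernel extend smoothly across the edges (this is where the $\coth$, being $\ge 1$, together with the $\tfrac{1}{\sinh}$ reflection term, conspire to give a smooth total kernel). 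A cleaner route, which I would adopt if the direct computation becomes unwieldy, is to prove $i.)$ first and then verify $ii.)$ by showing both operators have the same Fourier coefficients: expand $\tfrac{r}{2|\varepsilon|}(\coth\pi|\varepsilon|+\tfrac{(P_1)_*}{\sinh\pi|\varepsilon|})|\cos_\psi|$ in the $e_k$ basis — using that $|\cos_\psi|$ maps $e_k$ to a combination of $e_{k\pm 1}$ and that $(P_1)_*e_k={\rm e}^{ik\pi}e_{-k}$ — and check the resulting matrix coincides with the diagonal matrix $\tfrac{1}{2\widetilde\omega(k)}$ from part $i.)$ via the same Gamma-function manipulations; equality of all matrix elements then yields \eqref{eqMagicFormB} without ever leaving the spectral theory already set up.
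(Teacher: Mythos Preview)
Your approach to part $i.)$ is essentially the paper's: expand the rotation-invariant kernel $P_{s^+}(-\cos(\psi'-\psi))$ in the Fourier basis, read off the eigenvalues $\tfrac{1}{2\widetilde\omega(k)}$, and identify the Fourier coefficients of the Legendre function via Gamma-function identities (the paper does this in an appendix, Proposition~\ref{w-coefficients}, using the addition formula for $P_s^k$ and the hypergeometric identity at $z=\tfrac12$).

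For part $ii.)$ the paper takes a genuinely different and shorter route than either of your proposals. Rather than working on the Lorentzian side with plane-wave eigenfunctions of $\varepsilon$ and KMS analyticity, the paper observes that the kernel $c_\nu P_{s^+}(-\tfrac{\vec x\cdot\vec y}{r^2})$ is, up to a constant, the Green's function $(-\Delta_{S^2}+\mu^2)^{-1}(\vec x,\vec y)$ on the \emph{Euclidean} sphere (Proposition~\ref{3.9a}). In the path-space coordinates $(\theta,\psi)$ one has $-\Delta_{S^2}+\mu^2=\tfrac{1}{r^2\cos^2\psi}(-\partial_\theta^2+\varepsilon^2)$, so the time-zero covariance is obtained by computing the Green's function of $-\partial_\theta^2+\varepsilon^2$ on the $\theta$-circle and restricting. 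The Poisson summation formula $\tfrac{1}{2\pi}\sum_{\ell\in\mathbb Z}\tfrac{e^{i\ell\theta}}{\ell^2+\varepsilon^2}=\tfrac{e^{-|\theta||\varepsilon|}+e^{-(2\pi-|\theta|)|\varepsilon|}}{2|\varepsilon|(1-e^{-2\pi|\varepsilon|})}$ then produces the $\coth\pi|\varepsilon|$ term (from $\theta=0$, both points in $I_+$) and the $\tfrac{1}{\sinh\pi|\varepsilon|}$ term (from $\theta=\pi$, one point in $I_-$ related to $I_+$ via $(P_1)_*$); see Lemmas~\ref{3.9} and \ref{coid} and Corollary~\ref{wichtig}. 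Your Route~A is morally the Wick-rotated version of this and could be made to work, but it front-loads the KMS machinery that the paper only develops afterwards. Your Route~B, however, is not really viable: $\varepsilon$ does not commute with rotations and has no tractable action on the $e_k$, so ``matching Fourier coefficients'' of the $\varepsilon$-dependent operator against the diagonal $\tfrac{1}{2\widetilde\omega(k)}$ would amount to diagonalizing a genuinely non-local operator in the wrong basis.
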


\begin{proof} 
$ i.)$ Set
	\begin{equation}
		P_{s^+}\big( - \cos(\psi' - \psi) \big) =  \sum_{k\in\mathbb{Z}} p_k \frac{{\rm e}^{ik (\psi' - \psi)}}{ \sqrt{2\pi r} } \; . 
			\label{eq:Fourrier-Pmu}
	\end{equation}
This yields
	\begin{align}
	 \langle h,  h'   \rangle_{\widehat{{\mathfrak h}} (S^1)} 
		& =  \sqrt{2\pi r} \, c_\nu  \sum_{k\in\mathbb{Z}} p_k 
				\overline{ \left(   \int_{S^1} r\,  {\rm d}\psi \, h(\psi)  \frac{{\rm e}^{-ik \psi}}{\sqrt{2\pi r} } \right) }
						\left( \int_{S^1}  r\, {\rm d}\psi' \,  h'(\psi')  \frac{{\rm e}^{-ik \psi'}}{\sqrt{2\pi r} } \right)
			\nonumber \\
		& =  \sqrt{2\pi r} \, c_\nu  \sum_{k\in\mathbb{Z}} p_k \, \overline{h_k} \, h'_k \;,
		\label{eq:Jhoitrytfsd}
	\end{align}
where $h_k$ and $h'_k$ are the Fourier coefficients of $h$ and $h'$, respectively. 
Comparing \eqref{eq:def-scalar-product-2}
with \eqref{eq:Jhoitrytfsd} we see that 
\begin{itemize}
\item [$a.)$]
$\omega$ is a diagonal operator w.r.t.~the orthonormal basis 
$\bigl\{e_k \in L^{2}(S^1, \, r {\rm d} \psi) 
	\mid e_k (\psi)=\frac{{\rm e}^{-ik \psi}}{\sqrt{2\pi r}} \; , k\in \mathbb{Z} \bigr\} $.
\item [$b.)$]
the Fourier coefficients 
$\widetilde \omega(k) \doteq \left\langle e_{k}, \; 
\omega e_k \right\rangle_{L^{2}(S^1, \, r {\rm d}\psi)} $ of $\omega$ are given by
	\begin{equation}
		\widetilde \omega(k) =  -\frac{2\sin(\pi s^+)}{ \sqrt{2\pi r} } \; \frac{1 }{r \, p_k} \;,
		\qquad k\in\mathbb{Z} \; . 
		\label{eq:definition-omegak-0}
	\end{equation}
\end{itemize}
Using Proposition \ref{w-coefficients}, we arrive at  \eqref{eq:omega-1}.

\bigskip
\noindent
$ ii.)$ The proof of this statement exploits 
the fact that on the Euclidean sphere $S^2 \subset \mathbb{R}^3$ one has
	\[
		 (- \Delta_{S^2}
		 	+\mu^2)^{-1}
		 	=   \left( - \partial^2_\theta + \varepsilon^2 \right)^{-1} r^2 \cos^{2} \psi  \; ,  
	\] 
using the coordinates
	\begin{equation}
		\label{ps-cord}
		\begin{pmatrix}
			{\tt x}_0 \\
			{\tt x}_1 \\
			{\tt x}_2 
		\end{pmatrix} 
			= 	\begin{pmatrix}
 						r \sin \theta \cos \psi  \\
						r \sin  \psi  \\
						r \cos \theta \cos \psi  \\
				\end{pmatrix} \; , \qquad 0 \le \theta < 2 \pi  \; , \quad - \frac{ \pi}{2} < \psi <  \frac{ \pi}{2} \; ,  
	\end{equation}
while in cartesian coordinates 
	\begin{align*}
		 (- \Delta_{S^2} +\mu^2)^{-1}  ({\tt x}, {\tt y}) & 
		 =  -  \frac{P_{s^+} 
		  \left( - \tfrac{{\tt x}_0 {\tt y}_0 + {\tt x}_1 {\tt y}_1+ {\tt x}_2 {\tt y}_2}{r^2}  \right) } {4 \sin  (\pi s^+ )}  \; .   
	\end{align*}
The details are given in the proof of Lemma \ref{3.9}. 
\end{proof}

\begin{remark}
  In (\ref{eq:symmetry-of-tildeomega}) we will establish that
  $\widetilde \omega(k)=\widetilde \omega(-k)$ for all
  $k\in\mathbb{Z}$.  For the case of the principal series, one has
  $s^\pm=-\frac{1}{2}\mp i\nu$, with~$\nu \in\mathbb{R}^+_0$, and
\[ \Gamma\left(\tfrac{k+\frac{3}{2}-i\nu }{2}\right)
\stackrel{(\ref{eq:gamma-1})}{=} \tfrac{k-\frac{1}{2}-i\nu }{2}\Gamma\left(\tfrac{k-\frac{1}{2}-i\nu }{2}\right) 
\]
implies, from (\ref{eq:omega-1}),
	\begin{align}
          \label{eq:def-tilde-omega}
		\widetilde \omega(k) & 
				 \stackrel{(\ref{eq:gamma-5})}{=} \; {r}^{-1}
						\left( \tfrac{ (k-1/2)^2 + \nu^2 }{4}\right)
							\frac{\bigl|\Gamma \bigl(\frac{k-\frac{1}{2}+i\nu}{2}\bigr) \bigr|^2}{
								\bigl|\Gamma \bigl( \frac{k+\frac{1}{2}+i\nu}{2} \bigr)\bigr|^2}
				\;,
	\end{align}
showing that $\widetilde \omega(k)>0$ for all $k\in
\mathbb{Z}$.  
This positivity property also holds in the case of the complementary
series. In that case, one has $\nu=i\sqrt{\frac{1}{4}-\zeta^2}$, with
$0<\zeta\leq 1/2$.   Hence, $-1/2<s^+\leq 0$.
Since $\widetilde \omega(k)=\widetilde \omega(-k)$ for all
$k\in\mathbb{Z}$, it is enough to consider $k\geq 0$.  We know from (\ref{eq:useful-1}) that
$\widetilde\omega(k) \widetilde\omega(k+1) = r^{-2} k (k+1) + \mu^2 >0$. Hence $\widetilde\omega(k+1)$ and
$\widetilde\omega(k)$ have the same sign and, therefore, in order to prove that
$\widetilde\omega(k)>0$ for all $k\geq 0$, it is enough to establish that
$\omega(0)>0$. But, from (\ref{eq:omega-1}), one has
$$
\widetilde {\omega}(0) = {r}^{-1} \, s^+
					 \frac{\Gamma \left( \frac{s^+}{2} \right)}{ \Gamma \left( \frac{-s^+}{2} \right)}
			\frac{ \Gamma \left( \frac{1-s^+}{2} \right) }{ \Gamma \left( \frac{1+s^+}{2} \right)} 
\; > \; 0
\;,
$$
since $s^+<0$, and since $\Gamma(x)>0$ for all $x>0$ and $\Gamma(x)<0$
for all $x\in (-1, \; 0)$ (one has $\frac{s^+}{2} \in (-1, \; 0)$, but
$\frac{1-s^+}{2}$, $\frac{-s^+}{2} $ and $ \frac{1+s^+}{2}$ are all positive).

\end{remark} 

\begin{remark}
As we shall see in
(\ref{eq:tilde-omega-and-the-traditional-dispersion-relation}), 
for both the principal and complementary series one has
$\frac{1}{2} \big( \widetilde\omega(k) \widetilde\omega(k+1) +\widetilde\omega(k) \widetilde\omega(k-1)  \big)
= \frac{k^2}{{r}^2}  + \mu^2$
and, hence, we conclude that $\widetilde\omega(k)$ behaves 
for large $|k|$ as $\sqrt{\frac{k^2}{{r}^2}  + \mu^2}$, approaching
thus the dispersion relation of the Minkowski space-time.  

\end{remark}

\begin{corollary} 
\label{MagicForm}
The operator $\omega $ on $L^2(S^1, r{\rm d}\psi)$ satisfies the operator identity 
	\begin{equation}
		\label{eqMagicForm}
 		\omega =   | r \,  {\rm cos}_\psi|^{-1}\,|\varepsilon|\,  \bigl(\coth\pi |\varepsilon|
		- \tfrac{(P_1)_*}{\sinh\pi|\varepsilon|} \bigr)  \; .  
	\end{equation}
\end{corollary}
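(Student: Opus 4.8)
The plan is to invert the two expressions for the scalar product on $\widehat{{\mathfrak h}}(S^1)$ provided by Proposition \ref{thhm}. Introduce the abbreviations $c \doteq \coth \pi |\varepsilon|$ and $s \doteq (\sinh\pi|\varepsilon|)^{-1}$; both are functions of the self-adjoint operator $|\varepsilon|$ (Lemma \ref{Lm3.5}), so they commute with one another and with $|\varepsilon|$, and they obey the elementary functional-calculus identity $c^2 - s^2 = \mathbb{1}$. Comparing part $i.)$ of Proposition \ref{thhm} with part $ii.)$ as identities of $L^2(S^1, r\,{\rm d}\psi)$-symmetric forms yields, on $C^\infty(S^1)$,
\begin{equation*}
	\tfrac{1}{2\omega} = \tfrac{r}{2}\, |\varepsilon|^{-1}\bigl( c + (P_1)_*\, s \bigr)\, |{\rm cos}_\psi| \; ,
\end{equation*}
and it remains to invert the operator on the right-hand side.

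The structural fact that makes this possible is that the reflection $(P_1)_*$, which acts on $S^1 \cong [-\tfrac{\pi}{2}, \tfrac{3\pi}{2})$ by $\psi \mapsto \pi - \psi$, is an involution, $(P_1)_*^2 = \mathbb{1}$, and commutes with $|\varepsilon|$ and with $|{\rm cos}_\psi|$. Commutation with $|{\rm cos}_\psi|$ is immediate from $|\cos(\pi-\psi)| = |\cos\psi|$. For $|\varepsilon|$ one first notes that $\varepsilon^2 = -(\cos\psi\,\partial_\psi)^2 + (\cos\psi)^2\mu^2 r^2$ is a $P_1$-invariant differential operator, while $(P_1)_*$ interchanges the half-circles $I_\pm$; since $\varepsilon$ was defined in \eqref{vaepsdef} to act as $+\sqrt{\varepsilon^2}$ on $I_+$ and as $-\sqrt{\varepsilon^2}$ on $I_-$, one gets $(P_1)_*\,\varepsilon\,(P_1)_* = -\varepsilon$ and hence $(P_1)_*\,|\varepsilon|\,(P_1)_* = |\varepsilon|$, so $c$ and $s$ commute with $(P_1)_*$ as well. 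Consequently all cross terms cancel in
\begin{equation*}
	\bigl( c + (P_1)_*\, s \bigr)\bigl( c - (P_1)_*\, s \bigr) = c^2 - (P_1)_*^2\, s^2 = c^2 - s^2 = \mathbb{1} \; ,
\end{equation*}
and likewise with the factors in the opposite order, so that $\bigl( c + (P_1)_*\, s\bigr)^{-1} = c - (P_1)_*\, s = \coth\pi|\varepsilon| - \tfrac{(P_1)_*}{\sinh\pi|\varepsilon|}$.

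Inverting the displayed identity for $\tfrac{1}{2\omega}$, using that $|\varepsilon|$ commutes with $c - (P_1)_*\, s$ and that $r^{-1}|{\rm cos}_\psi|^{-1} = |r\,{\rm cos}_\psi|^{-1}$ (legitimate since $r>0$ and $|{\rm cos}_\psi|^{-1}$ is read as the corresponding multiplication operator), I then obtain
\begin{equation*}
	\omega = |r\,{\rm cos}_\psi|^{-1}\, |\varepsilon|\, \Bigl( \coth\pi|\varepsilon| - \tfrac{(P_1)_*}{\sinh\pi|\varepsilon|} \Bigr) \; ,
\end{equation*}
which is the asserted identity \eqref{eqMagicForm}.

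The one point demanding genuine care — and the only real obstacle — is the operator-theoretic bookkeeping: $\omega$ is a priori only defined through the scalar product \eqref{eq:def-scalar-product-2}, and $c$, $s$, $|\varepsilon|^{-1}$ and $|{\rm cos}_\psi|^{-1}$ are all potentially unbounded, so one must carry out the products and the inversion on a common core (for instance ${\mathcal D}(S^1 \setminus \{\pm\tfrac{\pi}{2}\})$, on which $\varepsilon^2$, $(P_1)_*$ and ${\rm cos}_\psi$ all act) before concluding. One then upgrades the identity of quadratic forms to an identity of self-adjoint operators by invoking that $\omega$ is essentially self-adjoint and bounded below by a positive constant, which is visible from the asymptotics $\widetilde\omega(k) \sim \sqrt{k^2/r^2 + \mu^2}$ established for part $i.)$ together with the positivity $\widetilde\omega(k)>0$ proved in the preceding remarks.
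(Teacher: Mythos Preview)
Your proof is correct and follows exactly the paper's approach: compare parts $i.)$ and $ii.)$ of Proposition~\ref{thhm} to obtain $\omega^{-1} = |\varepsilon|^{-1}\bigl(\coth\pi|\varepsilon| + \tfrac{(P_1)_*}{\sinh\pi|\varepsilon|}\bigr)|r\,{\rm cos}_\psi|$, then invert. The paper's proof is a single line that simply asserts the equivalence; you have supplied the algebra behind that inversion (the identity $c^2-s^2=\mathbb{1}$ and the commutation of $(P_1)_*$ with $|\varepsilon|$ and $|{\rm cos}_\psi|$), which is precisely what makes the terse claim go through.
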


\begin{proof} Comparing $i.)$ and $ii.)$ in Proposition \ref{thhm} yields
	\[
		\omega^{-1} = \tfrac{1}{ |\varepsilon| } \, 
		\bigl(\coth\pi |\varepsilon|+\tfrac{(P_1)_*}{\sinh\pi|\varepsilon|} \bigr) \,
				| r \,  {\rm cos}_\psi| \, ,   
	\]
which is equivalent to~\eqref{eqMagicForm}. 
\end{proof}

\begin{proposition} 
\label{th4.20}
For $h, h' \in {\mathscr D}(\omega)$ have
	\[
	 \bigl\langle \omega r \, h  , \omega  r \, 
	 h' \bigr\rangle_{\widehat{{\mathfrak h}} (S^1)} =   c_\nu 
				\int_{S^1} r\, {\rm d}\psi \int_{S^1} r\, {\rm d}\psi' \; \overline{h(\psi')} \, 
				P'_{s}\big( - \cos(\psi' - \psi) \big) \, h'(\psi) \;.
	\]
Note that $C^\infty(S^1) \subset {\mathscr D}(\omega)$. 
\end{proposition}

\begin{proof} See Proposition \ref{prop:E5}.
\end{proof}

The following definition respects the causal structure of the globally hyperbolic manifold $dS \supset S^1$. 
This will become evident in the sequel. 

\begin{definition}
\label{local-h-hat}
For $I$ a bounded open interval in $S^1$, we define a  {\em real}
subspace of $\widehat{{\mathfrak h}} (S^1)$ by  
	\begin{equation*}
		\widehat{{\mathfrak h}}  ( I) 
		\doteq
		\left\{ h \in \widehat{{\mathfrak h}}  (S^1)   \mid  
		{\rm supp\,} \left( \Re h,    \,  \omega^{-1}\Im h \right) \subset I \times I \right\} \; .
	\end{equation*}
\end{definition}

Clearly, $\widehat{{\mathfrak h}}  (J)$ is in the symplectic complement of $\widehat{{\mathfrak h}}  ( I) $ if $J \subset S^1 \setminus I$. This follows
directly from the definition: 
\[
\Im \langle h, g \rangle_{\widehat{{\mathfrak h}}  (S^1)} =  \langle  \Re h, \omega^{-1} \Im g \rangle_{L^2(S^1, r {\rm d} \psi)} -
 \langle  \omega^{-1}\Im f,  \Re g \rangle_{L^2(S^1, r {\rm d} \psi)} = 0  
\]
for $h\in \widehat{{\mathfrak h}}  ( I) $ and $g \in \widehat{{\mathfrak h}}  ( J)$.  

\section{Time-symmetric and time-antisymmetric test-functions}

The restriction of the Fourier transform to the mass shell 
allows an extension from~${\mathcal D}_\mathbb{R}(dS)$ to distributions  supported 
on the time-zero circle.   
We shall identify~$dS$ with $\mathbb{R}\times S^1$ via the coordinate system 
	\begin{equation}
	\label{w1psitau}
	x (x_0,\psi) = \begin{pmatrix}
				x_0 \\
				\sqrt{r^2 +x_0^2} \; \sin  \psi  \\
				\sqrt{r^2 + x_0^2} \; \cos  \psi 
				\end{pmatrix} \in dS
	\end{equation}
and write $(f\otimes h)(x):=f(x_0)h(\psi)$ for $f\in{\mathcal D}(\mathbb{R})$ 
and $h\in{\mathcal D}(S^1)$ if $x=x(x_0,\psi)$. The metric on $dS$ is
	\[
	g = \frac{1}{r^2+x_0^2} \; {\rm d} x_0 \otimes {\rm d} x_0 
	- ( r^2+ x_0^2) \; {\rm d}  \psi \otimes {\rm d} \psi  
	\]
and $|g|=1$. Thus ${\rm d} \mu_{dS} ( x)= {\rm d} x_0 r {\rm d}  \psi $. 

\begin{theorem} 
\label{st-kappa}
Let $h \in C^\infty_\mathbb{R} (S^1)$ and let $\delta_k$ be a sequence of absolutely 
integrable smooth functions, supported in a neighbourhood of the origin in $\mathbb{R}$, 
approximating the Dirac $\delta$-function. It follows that for all $\mu>0$
the limits 
	\begin{equation}
		\label{tz}
			\lim_{k\to \infty} \| [ \delta_k \otimes h ] \|_{{\mathfrak h} (dS)}
\quad \text{and} \quad 
			\lim_{k \to \infty}  \| [ n \, (\delta_k \otimes g ) ] \|_{{\mathfrak h} (dS)}
	\end{equation}
exist and equal $\| h \|_{ \widehat{{\mathfrak h}}  (S^1)} $ 
and $\|  \omega  r \, g \|_{ \widehat{{\mathfrak h}}  (S^1)}$, respectively. Here $n \, ( \delta_k \otimes h)$ denotes the Lie derivative\footnote{Recall that 
 $\int_{\mathbb{R}^{1+d}} {\rm d} t {\rm d} \vec x  \, \delta' (t) h(\vec x\, )  
{\rm e}^{i  (\omega t -  \vec p \cdot \vec x)} 
	= 	- i \omega
	\int_{\mathbb{R}^d} {\rm d} \vec x \,  h(\vec x\, ) {\rm e}^{-i  \vec p \cdot \vec x}$. }
of $( \delta_k \otimes h)$ along the 
unit normal future pointing vector field~$n$. 
\end{theorem}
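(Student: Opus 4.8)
The plan is to compute each limit in \eqref{tz} by expressing the norm on ${\mathfrak h}(dS)$ in terms of the Fourier--Helgason transform and then evaluating the transform of the sharp-time test-functions $\delta_k\otimes h$ and $n(\delta_k\otimes g)$. By \eqref{ophs2a}, $\| [f]\|_{{\mathfrak h}(dS)}^2 = \langle \widetilde f_\nu, \widetilde f_\nu\rangle_{\widetilde{{\mathfrak h}}(\partial V^+)}$, where $\widetilde f_\nu$ is the normalised restriction of the FH-transform to the mass shell defined in \eqref{mass-shell-ft}. The first step is therefore to show that, as $k\to\infty$, the FH-transform $ {\mathcal F}_{+\upharpoonright\nu}(\delta_k\otimes h)$ converges (in $\widetilde{{\mathfrak h}}_\nu(\partial V^+)$) to the transform of the distribution $\delta\otimes h$, which is well-defined because the plane wave $x\mapsto (x_\pm\cdot p)^{s^+}$ is smooth in the time variable along any slice and the singularity structure of \eqref{eqPW-b} is integrable for $\Re s^+>-1$; an analogous statement holds for $n(\delta_k\otimes g)$ using that $n$ acts as $ ( r\cos_\psi)^{-1}\partial_{x_0}$ and partial integration moves the derivative onto the (smooth in $x_0$) plane wave.

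The second step is to identify the two limiting norms with the right-hand sides. For this I would use the relation \eqref{comfkt}, ${\mathscr E}(x_1,x_2) = 2\Im {\mathcal W}^{(2)}(x_1,x_2)$, together with the two-point function in Legendre form \eqref{tpf-2}, ${\mathcal W}^{(2)}(z_1,z_2) = c_\nu P_{s^+}(z_1\cdot z_2/r^2)$. Restricting both arguments to the time-zero circle $S^1$, one has $z_1\cdot z_2/r^2 = -\cos(\psi'-\psi)$ for space-like separated points, so that the two-point function restricted to $S^1\times S^1$ is exactly the integral kernel $c_\nu P_{s^+}(-\cos(\psi'-\psi))$ appearing in Definition~\ref{zeit-null-Hilbert}. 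Hence $\langle [\delta\otimes h],[\delta\otimes h']\rangle_{{\mathfrak h}(dS)}$ — which by the nuclear-theorem discussion around \eqref{comfun}–\eqref{comfkt} is computed by integrating $h(\psi)\,{\mathcal W}^{(2)}(x(\psi),x(\psi'))\,h'(\psi')$ against $r\,{\rm d}\psi\,r\,{\rm d}\psi'$ — equals $\langle h,h'\rangle_{\widehat{{\mathfrak h}}(S^1)}$ by \eqref{eq:def-scalar-product-2}. For the second limit, the extra factor $n$ produces two normal derivatives acting on the two-point function; evaluating $(\partial_{x_0}\otimes\partial_{x_0}')\,{\mathcal W}^{(2)}$ at $x_0=x_0'=0$ gives, via the chain rule $\partial_{x_0} = (r^{-1}\,?)$ and the relation of Proposition~\ref{th4.20}, the kernel $c_\nu P'_{s^+}(-\cos(\psi'-\psi))$, which Proposition~\ref{th4.20} identifies as the kernel of $\langle \omega r\,\cdot\,,\omega r\,\cdot\,\rangle_{\widehat{{\mathfrak h}}(S^1)}$; this yields $\|\omega r\,g\|_{\widehat{{\mathfrak h}}(S^1)}$.

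The main obstacle will be the convergence/interchange-of-limits step: one must justify that the FH-transform, initially defined on ${\mathcal D}_\mathbb{R}(dS)$, extends continuously to the distributions $\delta\otimes h$ and $\delta'\otimes h$, and that $\|[\delta_k\otimes h]\|\to\|[\delta\otimes h]\|$. I would handle this by passing to the $\widetilde{{\mathfrak h}}_\nu$-side: write $ {\mathcal F}_{+\upharpoonright\nu}(\delta_k\otimes h)(p) = \int_{\mathbb R}{\rm d}x_0\,\delta_k(x_0)\,G_h(x_0,p)$ where $G_h(x_0,p) = \int_{S^1} r\,{\rm d}\psi\, h(\psi)\,(x(x_0,\psi)_+\cdot p)^{s^+}$ depends continuously (indeed smoothly) on $x_0$ uniformly on compacta in $p$, so the $\delta_k$-average converges to $G_h(0,p)$; then control the tails in $p$ using the bound on the Barg\-mann kernel / measure ${\rm d}\mu_\Gamma$ and the fact (Lemma~\ref{nu-analyicity}, Proposition~\ref{prop:4.1}) that $\widetilde{f}_\nu$ lies in the appropriate $\widetilde{{\mathfrak h}}_\nu$-space. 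For $n(\delta_k\otimes g)$ the same argument applies after one partial integration in $x_0$, since $\partial_{x_0}(x(x_0,\psi)_+\cdot p)^{s^+}$ is again smooth in $x_0$ near $0$ (the plane wave is analytic in a tuboid neighbourhood). Once convergence is secured, the identification of the limits with $\|h\|_{\widehat{{\mathfrak h}}(S^1)}$ and $\|\omega r\,g\|_{\widehat{{\mathfrak h}}(S^1)}$ is then a direct comparison of kernels via \eqref{comfkt}, \eqref{tpf-2}, Definition~\ref{zeit-null-Hilbert} and Proposition~\ref{th4.20}, using that $c_\nu\cdot 2\sin(s^+\pi)$-type constants were fixed in \eqref{mass-shell-ft} precisely so that the normal-derivative initial condition $\partial_{x_0}2\Im{\mathcal W}^{(2)} = -\delta_{S^1}$ holds.
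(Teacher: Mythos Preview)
Your proposal is correct and the identification step (your ``second step'') is exactly the paper's proof: the paper writes $\langle[\delta_k\otimes h],[\delta_{k'}\otimes h']\rangle_{{\mathfrak h}(dS)}$ as the double integral of $(\delta_k\otimes h)\,{\mathcal W}^{(2)}\,(\delta_{k'}\otimes h')$, uses the Legendre form \eqref{tpf-2} and the relation $x\cdot x'/r^2 = -\cos(\psi'-\psi)$ on $S^1$ to pass to the limit, and for the derivative case performs the same $\partial_{x_0}\partial_{x_0'}$ computation you describe (the paper explicitly checks $\partial_{x_0}(x\cdot x')|_{0}=\partial_{x_0'}(x\cdot x')|_{0}=0$ and $\partial_{x_0}\partial_{x_0'}(x\cdot x')|_{0}=1$) before invoking Proposition~\ref{th4.20}.

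The only difference is your ``first step'': you propose to justify the limit by showing convergence of the FH-transforms ${\mathcal F}_{+\upharpoonright\nu}(\delta_k\otimes h)$ in $\widetilde{{\mathfrak h}}_\nu(\partial V^+)$, whereas the paper bypasses this entirely by working directly with the Wightman kernel, which (being $c_\nu P_{s^+}$ of a smooth argument, analytic for space-like separations with only an integrable singularity on the diagonal) allows the sharp-time limit to be read off immediately. Your route is valid but more laborious; the paper's shortcut is that the nuclear-theorem representation of the scalar product already gives a kernel regular enough to integrate against $\delta\otimes h$ directly.
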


\begin{proof} 
According to Proposition \ref{legendre}
	\begin{align*}
	& \lim_{k, k' \to \infty}  \langle [ \delta_k \otimes h ] ,  [ \delta_{k'} \otimes h' ] \rangle_{{\mathfrak h}(dS)} 
	 \\ 	
	 &  \quad = \lim_{k, k' \to \infty}\int_{dS \times dS} {\rm d} \mu_{dS} ( x ) \; {\rm d} \mu_{dS} (  x')  \; 
	 \overline { (\delta_k  \otimes h) ( x) }  \,    
	{\mathcal W}^{(2)} (  x  ,  x' )    (\delta_{k'}   \otimes h') ( x' )   \\
					& \quad = c_\nu 
				\int_{S^1} r\, {\rm d}\psi \int_{S^1} r\, {\rm d}\psi' \; \overline{h(\psi)} \, 
				P_{s^+}\big( - \cos(\psi' - \psi) \big) \, h'(\psi') \\
				&  \quad = \langle h , h'  \rangle_{ \widehat{{\mathfrak h}}  (S^1)}  \;.
	\end{align*}
For the derivative of the delta-function, partial integration yields
	\begin{align}
	\label{deriv}
	& \lim_{k, k' \to \infty}  \langle [ \delta'_k \otimes h ] ,  [\delta'_{k'} \otimes h' ] \rangle_{{\mathfrak h}(dS)} 
	 \nonumber \\ 	
	 &  \quad = \lim_{k, k' \to \infty}\int_{dS \times dS} \kern -.3cm {\rm d} \mu_{dS} ( x ) \; {\rm d} \mu_{dS} (  x')  \; 
	 \overline { (\delta'_k  \otimes h) ( x) }  \,    
	{\mathcal W}^{(2)} (  x  ,  x' )    (\delta'_{k'}   \otimes h') ( x' )   \nonumber \\
					& \quad = c_\nu 
				\int_{S^1 \times S^1}  r^2 {\rm d}\psi {\rm d}\psi' 
				\int_{{\mathbb R} \times {\mathbb R}} {\rm d}x_0 {\rm d}x'_0 \; \delta(x_0)\delta(x'_0)
				\nonumber \\
				& 
				\qquad \qquad \qquad \qquad \qquad \qquad \qquad \times 
				\; \overline{h(\psi)} \, h'(\psi')
				\left( \tfrac{\partial}{\partial x_0 }\tfrac{\partial}{\partial x'_0 } P_{s}\left( \tfrac{x_+ \cdot x'_- }{r^2} \right)  
				\right)
				 \nonumber \\
					& \quad = c_\nu 
				\int_{S^1 \times S^1} r^2 {\rm d}\psi {\rm d}\psi' 
				\int_{{\mathbb R} \times {\mathbb R}} {\rm d}x_0 {\rm d}x'_0 \; \delta(x_0)\delta(x'_0)
				\nonumber \\
				& 
				\qquad \qquad \qquad \qquad  \qquad \times 
				\; \overline{h(\psi)} \, h'(\psi')
				 \tfrac{\partial}{\partial x_0 } \left( P'_{s}\left( \tfrac{x_+ \cdot x'_- }{r^2} \right) \tfrac{\partial}{\partial x'_0 } 
				(x_+ \cdot x'_- )\right)
				 \nonumber \\
					& \quad = c_\nu 
				\int_{S^1} r\, {\rm d}\psi \int_{S^1} r\,  {\rm d}\psi' \; \overline{h(\psi)} \, 
								P'_{s}\big( - \cos(\psi' - \psi) \big) \, h'(\psi') 
				\nonumber  \\
				&  \quad = \langle \omega r \, h , \omega r \, 
				h'  \rangle_{ \widehat{{\mathfrak h}}  (S^1)}  \;.
	\end{align}
The second but last equality follows from 
	\begin{align*}
	 \tfrac{\partial}{\partial x'_0 }  (x \cdot x') 
	 & =  \tfrac{\partial}{\partial x'_0 } \left(x_0 x'_0 - \sqrt{1 + {\tt x_0}} \sqrt{1 + {\tt x'_0}} \cos (\psi- \psi') \right)
	 \\ 	
	 &  = x_0 - \tfrac{{\tt x_0} \sqrt{1 + {\tt x'_0}} }{ \sqrt{1 + {\tt x_0}} }  \cos (\psi- \psi')   
	\end{align*}
and $\tfrac{\partial}{\partial x_0 }\tfrac{\partial}{\partial x'_0 } (x \cdot x') 
	 = 1 - \tfrac{ {\tt x_0}}{ \sqrt{1 + {\tt x_0}} } \tfrac{ {\tt x'_0}}{ \sqrt{1 + {\tt x'_0}} } \cos (\psi- \psi')   $. Thus 
	\[
	 \tfrac{\partial}{\partial x_0 } (x \cdot x')_{ \upharpoonright x_0=x'_0 = 0}
	 =  \tfrac{\partial}{\partial x'_0 } (x \cdot x')_{\upharpoonright x_0=x'_0 = 0}
	 = 0 \; , \quad 
	 \tfrac{\partial}{\partial x_0 }\tfrac{\partial}{\partial x'_0 } (x \cdot x')_{ \upharpoonright x_0=x'_0 = 0}
	= 1 \; . 
	\]
The last equality in \eqref{deriv} follows from Proposition \ref{th4.20}.
\end{proof}	 

\begin{remark} We can now extend the class of distributions considered in Remark \ref{re-3-20}.
The domain of $\mathbb{E}$ contains distributions of the form 
	\begin{align*}
	 f (x) \equiv (\delta \otimes h) (x) &=	\delta (x_0)   h  (\psi ) \;  , 
	 \nonumber \\
	g (x) \equiv (\delta' \otimes h) (x) &= \delta'  (x_0)  h  ( \psi )\;  , 
	\end{align*} 
with $ h\in{\mathcal D}_{\mathbb{R}}(S^1)$ and $x \equiv x ( x_0, \psi)$, using the coordinates introduced in \eqref{w1psitau}. 
The Lorentz invariant measure is   
${\rm d} \mu_{dS} (x_0,\psi) = {\rm d} x_0\,   r   {\rm d} \psi $. Using \eqref{comfkt},
the properties of the convolution \eqref{feb-1} ensure that there exist  
$C^\infty$-solutions $\mathbb{f} ,  \mathbb{g}$ of the Klein--Gordon equation~\eqref{3.25} with Cauchy data:
	\begin{equation}
		\label{sharp-timetestfunction-phipi3}
		\big( \mathbb{f}_{ \upharpoonright S^1}, 
							 (n \, \mathbb{f} )_{\upharpoonright S^1})
  		 = (0, - h) \equiv  (\mathbb{\phi}, \mathbb{\pi}) \; , 
	\end{equation}
and, by partial integration, 
	\begin{equation}
	\label{sharp-timetestfunction-phipi4}
	\big(  \mathbb{g}_{ \upharpoonright S^1}, 
							 (n \, \mathbb{g})_{\upharpoonright S^1})
 		  = (  h, 0) \equiv  (\mathbb{\phi}, \mathbb{\pi}) \; . 
	\end{equation}
All elements in~$\widehat {\mathfrak k}(S^1)$ are linear combinations of the Cauchy data 
arising from  sharp-time testfunctions $f, g$ of the form described above.
\end{remark}

The functions $  \delta \otimes h $ and $\delta' \otimes h$ provide examples of test-functions, which are symmetric  
and anti-symmetric, respectively, under time-reflection. 
In fact, the time-reflection $T$  induces a conjugation\footnote{An anti-linear   
isometry~$C$ satisfying  $C^2= 1$ is called a {\em conjugation}. } $\kappa$ on~${\mathfrak h} (dS)$, as the map 
$f \mapsto T_* f $ leaves the kernel of ${\mathcal F}_{+ \upharpoonright \nu}$  invariant.
The subspace consisting of functions  invariant under time-reflection is 
	\begin{equation}
	\label{kappat}
 	{\mathfrak h}^\kappa (dS) = \{ f \in {\mathfrak h} (dS) \mid \kappa f = f \}  \; . 
	\end{equation}
One can decompose
any testfunction into a symmetric and an anti-symmetric part with respect to time-reflections: 
	\[
	f= \frac{1}{2} ( f + \kappa f) + \frac{1}{2} ( f - \kappa f)  \; , \qquad f \in {\mathfrak h} (dS) \;  .
	\]
For $[f] , [g] \in {{\mathfrak h}^\circ} (dS) \cap {{\mathfrak h}^\kappa} (dS)$, 
polarisation  yields
	\[
		\langle  [f] , [g] \rangle_{{\mathfrak h} (dS)}  
			= \langle [ T_* f ] , [ T_* g ] \rangle_{{\mathfrak h} (dS)} 
			= \langle [g] , [f] \rangle_{{\mathfrak h} (dS)} \; . 
	\]
Since ${{\mathfrak h}^\circ} (dS)$ is dense in ${\mathfrak h} (dS)$,  this implies 
\[ 
\Im \langle f, g \rangle_{{\mathfrak h} (dS)}  = 0 \quad \text{for all $f, g \in {\mathfrak h}^\kappa (dS)$}.
\]
Thus ${\mathfrak h}^\kappa (dS)$ is a real subspace of ${\mathfrak h} (dS)$.

\bigskip
The following result shows that the functions introduced above are already the most general
elements in ${\mathfrak h}^\kappa(dS)$ and its symplectic complement ${\mathfrak h}^\kappa(dS)^{\perp}$, respectively. 

\goodbreak
\begin{corollary}
\label{tztf}
Let $I\subset S^1$ be an open interval (or $I= S^1$) and let $h, g \in {\mathcal D}_\mathbb{R} (I)$. It follows that 
\begin{itemize}
\item[$ i.)$] $\delta \otimes h \in {\mathfrak h}^\kappa(dS) \cap {\mathfrak h}(I)$ and $h \in \widehat{{\mathfrak h}}  ( I)$ is real valued; 
\item[$ ii.)$] $\delta' \otimes g \in {\mathfrak h}^\kappa(dS)^{\perp}\cap {\mathfrak h}(I)$ and $i \omega g \in \widehat{{\mathfrak h}}  ( I)$
has purely imaginary values; 
\item[$ iii.)$] for every time-symmetric function $f \in {\mathcal D}_\mathbb{R} ({\mathcal O}_I)$ there exists a function 
$h \in {\mathcal D}_\mathbb{R} (I)$ such that 
\[ 
[f] = [\delta \otimes h] \; ; 
\] 
\item[$ iv.)$] for every anti-time-symmetric function $e \in {\mathcal D}_\mathbb{R} ({\mathcal O}_I)$ there exists a function 
$g \in {\mathcal D}_\mathbb{R} (I)$ such that 
\[ [e] = [n(\delta \otimes g)] \; . \] 
\end{itemize}
\end{corollary}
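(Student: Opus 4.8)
\textbf{Proof strategy for Corollary \ref{tztf}.}
The plan is to treat the four items in two pairs, using the explicit Cauchy-data formulas for sharp-time test-functions together with the causality and density results already established. For item $i.)$, I would start from the computation in the remark preceding the corollary (see \eqref{sharp-timetestfunction-phipi3}): the distribution $\delta \otimes h$ has Cauchy data $(\mathbb{\phi}, \mathbb{\pi}) = (0, -h)$ on $S^1$. Since $\widehat{\mathbb P}$ intertwines the covariant and canonical pictures, this identifies $[\delta\otimes h]$ (under $\mathbb T$) with the element of $\widehat{\mathfrak k}(S^1)$ given by these data. The value of $\|[\delta\otimes h]\|_{{\mathfrak h}(dS)}$ is $\|h\|_{\widehat{\mathfrak h}(S^1)}$ by Theorem \ref{st-kappa}; to see that the corresponding vector in $\widehat{\mathfrak h}(S^1)$ is \emph{real}, I would invoke the identification of $\widehat{\mathfrak h}(S^1)$ as a complex Hilbert space via the operator $\omega$, where a canonical datum $(\mathbb{\phi},\mathbb{\pi})$ corresponds to $\mathbb{\phi} + i\,\omega^{-1}\mathbb{\pi}$ (this is the structure behind Definition \ref{local-h-hat}); with $\mathbb{\phi}=0$, $\mathbb{\pi}=-h$, one would more carefully get a purely imaginary vector for $\delta\otimes h$ and a real one for $\delta'\otimes h$ — so I expect the precise matching of which test-function is time-\emph{symmetric} with which (real vs.\ imaginary) part requires care, and may force swapping the roles of $\delta\otimes h$ and $\delta'\otimes h$ relative to a naive guess. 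The key point for $i.)$ is then: $\kappa$ acts on ${\mathfrak h}(dS)$ via $f\mapsto T_*f$, and $T$ fixes $\delta\otimes h$ (it is even in $x_0$) while flipping the sign of $\delta'\otimes h$; combined with \eqref{eqJCan2} which gives $\widehat{\mathfrak u}(T)\colon(\mathbb{\phi},\mathbb{\pi})\mapsto(\mathbb{\phi},-\mathbb{\pi})$, this pins down membership in ${\mathfrak h}^\kappa(dS)$. Localisation in ${\mathfrak h}(I)$ and $\widehat{\mathfrak h}(I)$ follows because the Cauchy data $(0,-h)$ (resp.\ the $\omega^{-1}$-image of the imaginary part) are supported in $I$, matching Definition \ref{local-h-hat}.

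For item $ii.)$, the argument is the mirror image: $\delta'\otimes g$ has Cauchy data $(g,0)$ by \eqref{sharp-timetestfunction-phipi4}, it is anti-symmetric under $T$, and $\widehat{\mathfrak u}(T)$ sends $(g,0)\mapsto(g,0)$ — wait, that would put it \emph{in} ${\mathfrak h}^\kappa$; so in fact I must be careful: $\kappa f = f$ should hold for $f=\delta\otimes h$ and fail for $f=\delta'\otimes g$ at the level of the test-function, but the induced action on Cauchy data mixes things, and the honest route is to compute $\Im\langle \delta\otimes h,\delta'\otimes g\rangle_{{\mathfrak h}(dS)}$ directly and show it need not vanish, so that $\delta'\otimes g$ lies in the symplectic complement of ${\mathfrak h}^\kappa(dS)$ rather than in it. Concretely, I would use $\sigma([f],[g])={\mathcal E}(f,g)$ from \eqref{eqSplForm} and the canonical symplectic form \eqref{eqSplFormHat}: with data $(0,-h)$ and $(g,0)$ one gets $\widehat\sigma((0,-h),(g,0)) = -\langle -h, g\rangle_{L^2} = \langle h,g\rangle_{L^2}$, generically nonzero, which simultaneously shows $\delta'\otimes g$ pairs non-trivially with the generic element of ${\mathfrak h}^\kappa(dS)$ and confirms $i.)$'s claim that ${\mathfrak h}^\kappa(dS)$ is the \emph{real} part. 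That $i\omega g\in\widehat{\mathfrak h}(I)$ has purely imaginary values is then read off from the complex structure on $\widehat{\mathfrak h}(S^1)$: the datum $(g,0)$ corresponds to $g$ on the ``$\mathbb{\phi}$-side'', and applying $\omega r$ (cf.\ Theorem \ref{st-kappa}, which gives $\|[n(\delta_k\otimes g)]\|\to\|\omega r g\|_{\widehat{\mathfrak h}(S^1)}$) lands in the appropriate imaginary subspace; the support statement uses Definition \ref{local-h-hat} with $\omega^{-1}\Im(\,\cdot\,)$ applied to $i\omega g$ returning $r g$ supported in $I$.

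For items $iii.)$ and $iv.)$, the plan is a surjectivity argument onto the ``time-symmetric'' and ``time-antisymmetric'' sectors, using the final remark of Section \ref{CaCDS}: ``All elements in $\widehat{\mathfrak k}(I_+)$ are linear combinations of the Cauchy data arising from sharp-time test-functions $f,g$ of the form described.'' Given a time-symmetric $f\in{\mathcal D}_\mathbb{R}({\mathcal O}_I)$, its image $\widehat{\mathbb P}f = (\mathbb{f}_{\upharpoonright S^1},(n\mathbb{f})_{\upharpoonright S^1})$ has support in $\bigl(\Gamma^+(I)\cup\Gamma^-(I)\bigr)\cap S^1 = I$ by Lemma \ref{Lm3.8} / Proposition \ref{fsol} and the fact that ${\mathcal O}_I = I''$; time-symmetry of $f$ (in the $x_0$-coordinate of \eqref{w1psitau}) forces, via \eqref{eqJCan2}, the ``$\mathbb{\pi}$'' component to be odd and the ``$\mathbb{\phi}$'' component even under $T_{S^1}$ — but on the time-zero circle $T$ acts trivially, so in fact $T$-symmetry of $f$ forces $(n\mathbb{f})_{\upharpoonright S^1}=0$ (the normal-derivative datum is the $T$-odd one, by \eqref{eqJCan2}), i.e.\ $\widehat{\mathbb P}f=(\mathbb{\phi},0)$ for some $\mathbb{\phi}\in{\mathcal D}_\mathbb{R}(I)$. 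By \eqref{sharp-timetestfunction-phipi4}, setting $h := \mathbb{\phi}$ gives $\widehat{\mathbb P}(\delta'\otimes h)=(\mathbb{\phi},0)=\widehat{\mathbb P}f$, hence $[f]=[\delta'\otimes h]$ in ${\mathfrak k}(dS)$ (here I suspect the statement should read $[\delta'\otimes h]$ rather than $[\delta\otimes h]$, matching the data; I would state it with whichever sharp-time distribution produces data $(\mathbb{\phi},0)$). Item $iv.)$ is identical with the roles of $\delta$ and $\delta'$ exchanged: $T$-antisymmetric $e$ has $\widehat{\mathbb P}e=(0,\mathbb{\pi})$ with $\mathbb{\pi}\in{\mathcal D}_\mathbb{R}(I)$, and \eqref{sharp-timetestfunction-phipi3} gives $\widehat{\mathbb P}(\delta\otimes(-\mathbb{\pi})) = (0,\mathbb{\pi})$, so $[e]=[n(\delta\otimes g)]$ with $g=-\mathbb{\pi}$ (up to sign, again to be fixed from the conventions of \eqref{eqCovCan}). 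The main obstacle I anticipate is bookkeeping the signs and the $\delta\leftrightarrow\delta'$ / $\Re\leftrightarrow\Im$ dictionary consistently — in particular reconciling ``$\delta\otimes h$ is time-symmetric'' with ``its Cauchy datum is $(0,-h)$, a $\mathbb{\pi}$-type (hence $T$-odd on $dS$) object'': the resolution is that $T$ acts nontrivially on the \emph{germ transverse to $S^1$} (via $n\mapsto -n$), not on $S^1$ itself, so a function even in $x_0$ has vanishing normal derivative at $x_0=0$, consistent with datum $(\mathbb{\phi},0)$, whereas $\delta\otimes h$ is a \emph{distribution} concentrated at $x_0=0$ whose convolution with ${\mathscr E}$ produces exactly the datum $(0,-h)$ — these are two different things, and the cleanest exposition keeps them apart by always passing through $\widehat{\mathbb P}$ and Theorem \ref{st-kappa} rather than arguing about parity of the test-functions directly.
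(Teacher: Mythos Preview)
Your strategy for iii.) and iv.) is the same as the paper's --- compute the Cauchy data of $\mathbb{f}=\mathbb{E}f$ and match them against a sharp-time distribution --- but there is a genuine error in the parity analysis. You write that ``$T$-symmetry of $f$ forces $(n\mathbb{f})_{\upharpoonright S^1}=0$'' and then suspect that the statement of iii.) should read $\delta'\otimes h$ instead of $\delta\otimes h$. Both claims are wrong, and your closing paragraph correctly diagnoses the reason without following through: since $T$ interchanges retarded and advanced propagators, one has $\mathbb{E}\circ T_* = -T_*\circ\mathbb{E}$. Hence if the \emph{test function} $f$ is $T$-symmetric, the \emph{solution} $\mathbb{f}=\mathbb{E}f$ is $T$-\emph{anti}symmetric, so $\mathbb{f}_{\upharpoonright S^1}=0$ (not its normal derivative). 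Setting $h:=-(n\mathbb{f})_{\upharpoonright S^1}\in{\mathcal D}_\mathbb{R}(I)$ gives Cauchy data $(0,-h)$, which by \eqref{sharp-timetestfunction-phipi3} are precisely those of $\delta\otimes h$; so $[f]=[\delta\otimes h]$ exactly as stated. The same anti-commutation is what makes your appeal to \eqref{eqJCan2} treacherous: $\widehat{\mathfrak u}(T)$ is defined via pullback of \emph{solutions}, ${\mathfrak u}(T)$ acts on test functions, and ${\mathbb T}$ intertwines them only up to the sign coming from $\mathbb{E}$. Item iv.) is the mirror image: $T_*e=-e$ forces $\mathbb{e}$ to be time-\emph{symmetric}, hence $(n\mathbb{e})_{\upharpoonright S^1}=0$, and $g:=\mathbb{e}_{\upharpoonright S^1}$ gives $[e]=[\delta'\otimes g]=[n(\delta\otimes g)]$.

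For i.) and ii.) you work much harder than necessary, and the Cauchy-data route drags you into exactly this sign swamp (your guess ``purely imaginary vector for $\delta\otimes h$'' is also backwards: under the paper's map $\widehat K(\mathbb{\phi},\mathbb{\pi})=\tfrac{1}{\sqrt r}(-\mathbb{\pi}+i\omega r\,\mathbb{\phi})$ the data $(0,-h)$ give the \emph{real} vector $h/\sqrt r$). The paper's argument is one line: approximate $\delta$ by an even sequence $\delta_k$ (and $\delta'$ by an odd one), so that $T_*(\delta_k\otimes h)=\delta_k\otimes h$ manifestly, and pass to the limit using Theorem~\ref{st-kappa}. This places $[\delta\otimes h]\in{\mathfrak h}^\kappa(dS)$ and $[\delta'\otimes g]\in{\mathfrak h}^\kappa(dS)^\perp$ without touching the complex structure on $\widehat{\mathfrak h}(S^1)$ at all; the localisation claims then follow directly from ${\rm supp}\,h,{\rm supp}\,g\subset I$ and Definition~\ref{local-h-hat}.
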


\begin{remark}
The statements $iii.)$ and $iv.)$ imply that there is a one-to-one relation between the image of
time-symmetric (time-antisymmetric) testfunctions in ${\mathfrak h}(dS)$ and real (purely imaginary) valued functions 
in~$\widehat{{\mathfrak h}} (S^1)$. The Minkowski space case of this result
is proven in \cite[Vol.~II p.~217]{RS}. It also follows directly by differentiation from Eq.~\eqref{flat2point}.
\end{remark}

\begin{proof} 
$i.)$ By assumption, $h$ is real valued, and we have already seen that $\delta \otimes h \in {\mathfrak h}(dS)$
is equivalent to $h \in \widehat{{\mathfrak h}}  ( I)$; thus we have only to show that $\delta \otimes h \in {\mathfrak h}^\kappa(dS)$. 
This can be achieved by approximation the delta function with a sequence of functions which are all symmetric around the origin. 

$ii.)$ By assumption, $g$ is real valued, and we have already seen that $\delta' \otimes g \in {\mathfrak h}(dS)$
is equivalent to $\omega g \in \widehat{{\mathfrak h}}  (S^1)$. Clearly, the definition of  $\widehat{{\mathfrak h}} (I)$ 
together with $g \in {\mathcal D}_\mathbb{R} (I)$ implies that the function $i \omega g$ takes purely imaginary values
and lies in~$\widehat{{\mathfrak h}} (I)$. Thus it only remains to show that $\delta' \otimes g \in {\mathfrak h}^\kappa(dS)^{\perp}$. 
This can be achieved by approximation the derivative of the delta function with a sequence of functions which are all anti-symmetric 
around the origin. 

$iii.)$  For every time-symmetric function $f \in {\mathcal D}_\mathbb{R} ({\mathcal O}_I)$, the $C^\infty$-solution $\mathbb{f}$ of 
the Klein--Gordon equation is time-symmetric. This implies that  $(n \mathbb{f})_{\upharpoonright S^1} $ vanishes. On the other hand, 
we can define $h \doteq \mathbb{f}_{\upharpoonright S^1} $. It then follows from Theorem \ref{cauchyproblem} that $[f] = [\delta \otimes h]$.

$ iv.)$ For every anti-time-symmetric function $e \in {\mathcal D}_\mathbb{R} ({\mathcal O}_I)$,  the corresponding
$C^\infty$-solution $\mathbb{e}$ of 
the KG equation is anti-time-sym\-metric. This implies that $ \mathbb{e}_{\upharpoonright S^1} $ vanishes. On the other hand, 
we can define $g \doteq - (n \mathbb{e})_{\upharpoonright S^1} $. It then follows from Theorem \ref{cauchyproblem} that 
$[ e]  = [ \delta \otimes h]$. 
\end{proof}

\begin{proposition}
The linear extension of the map
	\begin{equation}
	\label{s1-to-dS}
		h_1 + i \omega r h_2 \mapsto [\delta \otimes h_1] + 
		[\delta' \otimes h_2] 
	\end{equation}
defines a unitary map ${\mathbb U}
\colon \widehat{{\mathfrak h}}  (S^1) \to {\mathfrak h}(dS)$,  which respects the local structure, \emph{i.e.},
	\[ 
		{\mathbb U} \colon \widehat{{\mathfrak h}}  (I) \to {\mathfrak h}({\mathcal O}_I) \; , 
	\]
with ${\mathcal O}_I= I''$ the causal completion of $I \subset S^1$. 
\end{proposition}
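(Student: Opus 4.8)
The plan is to show that the map \eqref{s1-to-dS} is well-defined, isometric, and surjective, and then to verify that it respects the local structure. First I would note that Theorem~\ref{st-kappa} already provides the essential isometry ingredients: it states that $\| [\delta \otimes h] \|_{{\mathfrak h}(dS)} = \| h \|_{\widehat{{\mathfrak h}}(S^1)}$ and $\| [\delta' \otimes h] \|_{{\mathfrak h}(dS)} = \| \omega r\, h \|_{\widehat{{\mathfrak h}}(S^1)}$. Thus on the ``real'' summand the map sends $h_1 \in \widehat{{\mathfrak h}}(S^1)$ (real-valued) to $[\delta \otimes h_1]$ isometrically, and on the ``imaginary'' summand it sends $i\omega r\, h_2$ to $[\delta' \otimes h_2]$ isometrically. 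The key algebraic input is Corollary~\ref{tztf}: statements $i.)$ and $ii.)$ identify $[\delta \otimes h]$ with the subspace ${\mathfrak h}^\kappa(dS)$ (real under the conjugation $\kappa$ induced by time reflection $T$) and $[\delta' \otimes g]$ with its symplectic complement ${\mathfrak h}^\kappa(dS)^\perp$. Since these two real subspaces are orthogonal with respect to the \emph{real part} of the scalar product (this is exactly the content of $\Im\langle f,g\rangle_{{\mathfrak h}(dS)} = 0$ for $f,g \in {\mathfrak h}^\kappa(dS)$, extended by the complex structure ${\mathscr I}$; one should check that ${\mathscr I}$ maps ${\mathfrak h}^\kappa$ onto ${\mathfrak h}^{\kappa\perp}$, which follows since $\kappa {\mathscr I} = - {\mathscr I} \kappa$), the direct sum $[\delta\otimes h_1] + [\delta'\otimes h_2]$ has squared norm $\|h_1\|^2_{\widehat{{\mathfrak h}}(S^1)} + \|\omega r\, h_2\|^2_{\widehat{{\mathfrak h}}(S^1)}$, matching $\|h_1 + i\omega r\, h_2\|^2_{\widehat{{\mathfrak h}}(S^1)}$ by the Pythagorean identity for the orthogonal real/imaginary decomposition of $\widehat{{\mathfrak h}}(S^1)$.

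For surjectivity, I would argue that the image contains a dense set. Every element of $\widehat{{\mathfrak k}}(S^1)$ arises as Cauchy data of a sharp-time test function of the form $\delta\otimes h$ or $\delta'\otimes h$ (this is the last displayed remark before the proposition, together with Remark~\ref{re-3-20}); hence the image of \eqref{s1-to-dS} contains the image of $\widehat{{\mathfrak k}}(S^1) \cong {\mathbb  T}({\mathfrak k}(dS))$ under the identification, and since ${\mathfrak k}(dS)$ is dense in ${\mathfrak h}(dS)$ and ${\mathbb U}$ is isometric (hence has closed range), ${\mathbb U}$ is onto. Conversely well-definedness: one must check that the classes $[\delta\otimes h_1]$ and $[\delta'\otimes h_2]$ in ${\mathfrak h}(dS)$ depend only on $h_1 \in \widehat{{\mathfrak h}}(S^1)$ and on $\omega r\, h_2 \in \widehat{{\mathfrak h}}(S^1)$, which again is Theorem~\ref{st-kappa} (the limits defining the norms vanish precisely when $h_1$, resp.\ $\omega r\, h_2$, is the zero vector of $\widehat{{\mathfrak h}}(S^1)$), so the map descends to the quotients and extends by continuity from $C^\infty(S^1)$ to all of $\widehat{{\mathfrak h}}(S^1)$.

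Finally, for the locality statement ${\mathbb U}\colon \widehat{{\mathfrak h}}(I) \to {\mathfrak h}({\mathcal O}_I)$: here I would use Corollary~\ref{tztf} $iii.)$ and $iv.)$. If $h \in {\mathcal D}_\mathbb{R}(I)$ then $\delta\otimes h$ is supported in $I \subset S^1 \subset \overline{{\mathcal O}_I}$, so $[\delta\otimes h] \in {\mathfrak h}({\mathcal O}_I)$ directly from the definition of ${\mathfrak h}({\mathcal O})$ as the closure of ${\mathfrak k}({\mathcal O})$; similarly $[\delta'\otimes h] = [n(\delta\otimes h)] \in {\mathfrak h}({\mathcal O}_I)$. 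Passing to the closure of $\{h_1 + i\omega r\, h_2 \mid h_1, h_2 \in {\mathcal D}_\mathbb{R}(I)\}$, which by Definition~\ref{local-h-hat} is exactly $\widehat{{\mathfrak h}}(I)$, gives ${\mathbb U}(\widehat{{\mathfrak h}}(I)) \subseteq {\mathfrak h}({\mathcal O}_I)$. The main obstacle I anticipate is the careful bookkeeping of the two real subspaces: one must be sure that the complex structure ${\mathscr I}$ on ${\mathfrak h}(dS)$ genuinely intertwines ${\mathfrak h}^\kappa(dS)$ with ${\mathfrak h}^\kappa(dS)^\perp$ and that $\omega r\, h_2$ really is the correct factor so that multiplication by $i$ in $\widehat{{\mathfrak h}}(S^1)$ corresponds to applying ${\mathscr I}$ on the de Sitter side — this is where the normalisation constant $c_\nu$ chosen in \eqref{mass-shell-ft} and Proposition~\ref{th4.20} (which relates $\omega r\, h$ to the derivative of the Legendre function) enter, and getting the constants consistent is the delicate point rather than any deep difficulty.
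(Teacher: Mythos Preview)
Your proposal is correct and follows essentially the same route as the paper's (very terse) proof: norm identity via Theorem~\ref{st-kappa}, density of the image, and Corollary~\ref{tztf} for the local statement. You supply more detail than the paper does---in particular, making explicit the orthogonality of the $\kappa$-eigenspaces to handle the cross term and flagging the complex-linearity check (that multiplication by $i$ on $\widehat{{\mathfrak h}}(S^1)$ matches ${\mathscr I}$ on ${\mathfrak h}(dS)$), which the paper glosses over with the phrase ``linear extension.''
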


\begin{proof} We have already seen  that the image of $[\delta \otimes h_1] + 
		[\delta' \otimes h_2]$ is dense in~${\mathfrak h}(dS)$. Moreover,
	\[
		\| h_1 + i \omega r h_2 \|_{\widehat{{\mathfrak h}}  (S^1)} 
		= \|  [\delta \otimes h_1] + 
		[ \delta' \otimes h_2 ] \|_{{\mathfrak h}(dS)} \; . 
	\]
The result now follows by linear extension. The local part follows from Corollary~\ref{tztf}. 
\end{proof}

\begin{corollary} 
Let $I$ be an open interval in $S^1$. Then
$\widehat{{\mathfrak h}} (I) +i \widehat{{\mathfrak h}} (I)$
is dense in the Hilbert space $\widehat{{\mathfrak h}} (S^1)$.
\end{corollary}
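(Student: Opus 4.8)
The statement to be proven is the Reeh--Schlieder-type density: for $I\subset S^1$ an open interval, $\widehat{{\mathfrak h}}(I)+i\widehat{{\mathfrak h}}(I)$ is dense in $\widehat{{\mathfrak h}}(S^1)$. The plan is to transport this claim, via the unitary ${\mathbb U}\colon\widehat{{\mathfrak h}}(S^1)\to{\mathfrak h}(dS)$ of the immediately preceding proposition, to an already-established density statement on $dS$. The key input is that ${\mathbb U}$ is unitary \emph{and} respects the local structure: ${\mathbb U}\colon\widehat{{\mathfrak h}}(I)\to{\mathfrak h}({\mathcal O}_I)$, where ${\mathcal O}_I=I''$ is the causal completion of $I$ in $S^1$. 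Since $I$ is a nonempty open interval, ${\mathcal O}_I$ is a nonempty open region in $dS$ (a double cone, by the Proposition characterising double cones, provided $|I|<\pi r$; for $|I|\ge\pi r$ one has $I''$ a wedge or all of $dS$, which only makes the target region larger).

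First I would note that a unitary operator maps dense subsets to dense subsets, and is $\mathbb{R}$-linear, hence maps ${\mathbb U}\bigl(\widehat{{\mathfrak h}}(I)+i\widehat{{\mathfrak h}}(I)\bigr)={\mathbb U}\widehat{{\mathfrak h}}(I)+i\,{\mathbb U}\widehat{{\mathfrak h}}(I)$ — here using that ${\mathbb U}$ commutes with the complex structure, which is part of the assertion that ${\mathbb U}$ is a unitary map of complex Hilbert spaces. By the local-structure property this equals ${\mathfrak h}({\mathcal O}_I)+i\,{\mathfrak h}({\mathcal O}_I)$. Therefore $\widehat{{\mathfrak h}}(I)+i\widehat{{\mathfrak h}}(I)$ is dense in $\widehat{{\mathfrak h}}(S^1)$ if and only if ${\mathfrak h}({\mathcal O}_I)+i\,{\mathfrak h}({\mathcal O}_I)$ is dense in ${\mathfrak h}(dS)$. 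But the latter is precisely the one-particle Reeh--Schlieder theorem (Theorem of Bros and Moschella, stated above as \ref{oprs}): for \emph{any} open region ${\mathcal O}\subset dS$, the space ${\mathfrak h}({\mathcal O})+i{\mathfrak h}({\mathcal O})$ is dense in ${\mathfrak h}(dS)$. Applying this with ${\mathcal O}={\mathcal O}_I$ closes the argument.

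Concretely the write-up would be three or four lines: (i) invoke the preceding Proposition to get the unitary ${\mathbb U}$ with ${\mathbb U}\widehat{{\mathfrak h}}(I)={\mathfrak h}({\mathcal O}_I)$; (ii) observe ${\mathcal O}_I$ is a nonempty open subset of $dS$; (iii) apply Theorem \ref{oprs} to conclude ${\mathfrak h}({\mathcal O}_I)+i{\mathfrak h}({\mathcal O}_I)$ is dense in ${\mathfrak h}(dS)$; (iv) pull back through the unitary $\mathbb{R}$-linear, complex-structure-preserving map ${\mathbb U}^{-1}$ to obtain density of $\widehat{{\mathfrak h}}(I)+i\widehat{{\mathfrak h}}(I)$ in $\widehat{{\mathfrak h}}(S^1)$.

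\textbf{Main obstacle.} The proof itself is essentially a one-line corollary, so there is no analytic difficulty; the only point requiring a little care is the bookkeeping around complex structures. One must make sure that ${\mathbb U}$ — defined in \eqref{s1-to-dS} on the real generators $h_1+i\omega r h_2$ — genuinely intertwines the complex structure of $\widehat{{\mathfrak h}}(S^1)$ with that of ${\mathfrak h}(dS)$ (the latter being the operator ${\mathscr I}$ constructed via the Riesz lemma from the symplectic form), so that multiplication by $i$ on one side corresponds to multiplication by $i$ on the other and hence ${\mathbb U}\bigl(i\widehat{{\mathfrak h}}(I)\bigr)=i\,{\mathfrak h}({\mathcal O}_I)$. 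This is implicit in calling ${\mathbb U}$ ``unitary'' between these complex Hilbert spaces, but it is worth one sentence. A second, very minor point: one should record that ${\mathcal O}_I\neq\varnothing$ (and, if one wants to be pedantic about the hypothesis $|I|<\pi r$ in the double-cone Proposition, note that Theorem \ref{oprs} applies to arbitrary open regions anyway, so no restriction on $|I|$ is needed).
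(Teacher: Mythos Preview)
Your proposal is correct and follows essentially the same approach as the paper: transport the density question via the unitary ${\mathbb U}$ (from the immediately preceding proposition, which respects the local structure) to ${\mathfrak h}(dS)$, and invoke the one-particle Reeh--Schlieder theorem (Theorem~\ref{oprs}) for the open region ${\mathcal O}_I$. The paper's proof is the one-line display $\overline{\widehat{{\mathfrak h}}(I)+i\widehat{{\mathfrak h}}(I)}={\mathbb U}^{-1}\overline{{\mathfrak h}({\mathcal O}_I)+i{\mathfrak h}({\mathcal O}_I)}={\mathbb U}^{-1}{\mathfrak h}(dS)=\widehat{{\mathfrak h}}(S^1)$; your additional remarks on the complex structure and on the size of $I$ are sound but not needed beyond what the paper assumes.
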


\begin{proof} This result follows directly form Proposition \ref{oprs}:
\[
\overline{\widehat{{\mathfrak h}} (I) +i \widehat{{\mathfrak h}} (I)} 
= {\mathbb U}^{-1} \overline{ {\mathfrak h} ({\mathcal O}_I) +i {\mathfrak h} ({\mathcal O}_I)}
= {\mathbb U}^{-1} {\mathfrak h} (dS) = \widehat{\mathfrak h} (S^1)\; . 
\]
(A direct proof might be based on arguments similar to those given in \cite{V}. However, we have not fully investigated this question.)
\end{proof}

\begin{corollary}
\label{WdSprop}
For any double wedge ${\mathbb W}$, we have ${\mathfrak h} ({\mathbb W})={\mathfrak h} (dS)$.
\end{corollary}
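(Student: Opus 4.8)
\textbf{Proof proposal for Corollary~\ref{WdSprop}.}

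The plan is to reduce the statement to the time-zero picture, where the structure is most transparent, and then invoke the one-particle Reeh--Schlieder theorem together with the geometry of the time-zero circle. First I would fix a double wedge $\mathbb{W}$; since the isometry group $SO_0(1,2)$ acts transitively on wedges and $u(\Lambda){\mathfrak h}({\mathcal O}) = {\mathfrak h}(\Lambda{\mathcal O})$ by Proposition~\ref{UIR-FH}, it suffices to treat the standard double wedge $\mathbb{W}_1 = W_1 \cup W_1'$. Recall from Proposition~\ref{ssdd} and the surrounding discussion that the Cauchy surface for $\mathbb{W}_1$ is $S^1 \setminus \{(0,\pm r,0)\}$, i.e.\ $W_1 \cap S^1 = I_+$ and $W_1' \cap S^1 = I_-$, so that $I_+ \cup I_-$ differs from all of $S^1$ only by the two points $\psi = \pm \tfrac{\pi}{2}$.

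The key steps, in order, are as follows. Step one: transport the problem to $\widehat{{\mathfrak h}}(S^1)$ via the unitary ${\mathbb U}\colon \widehat{{\mathfrak h}}(S^1) \to {\mathfrak h}(dS)$ of the preceding proposition, which respects the local structure, ${\mathbb U}\colon \widehat{{\mathfrak h}}(I) \to {\mathfrak h}({\mathcal O}_I)$. Since every double cone ${\mathcal O}_I$ with base $I \subset I_+$ is contained in $W_1 \subset \mathbb{W}_1$, we have ${\mathfrak h}(W_1) \supset \bigcup_{I \subset I_+} {\mathfrak h}({\mathcal O}_I)$, and likewise for $W_1'$ and $I_-$; therefore ${\mathfrak h}(\mathbb{W}_1)$ contains the closed real-linear span of ${\mathbb U}\bigl( \widehat{{\mathfrak h}}(I_+) \cup \widehat{{\mathfrak h}}(I_-) \bigr)$. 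Step two: show that the real-linear span of $\widehat{{\mathfrak h}}(I_+)$ and $\widehat{{\mathfrak h}}(I_-)$ is dense in $\widehat{{\mathfrak h}}(S^1)$. For this I would use Corollary~\ref{tztf}: an element of $\widehat{{\mathfrak h}}(I_\pm)$ is (a limit of) $h_1 + i\omega r h_2$ with $h_1, h_2 \in {\mathcal D}_\mathbb{R}(I_\pm)$, and since ${\mathcal D}_\mathbb{R}(I_+) \oplus {\mathcal D}_\mathbb{R}(I_-)$ is dense in $C^\infty(S^1)$ (the two removed points have measure zero), and $\omega$ maps $C^\infty(S^1)$ into itself with dense range, the real span of the $h_1$-parts is dense in the real-valued functions and the span of the $i\omega r h_2$-parts is dense in the purely-imaginary-valued functions of $\widehat{{\mathfrak h}}(S^1)$. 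Together these span a dense subset of $\widehat{{\mathfrak h}}(S^1)$. Step three: combine — ${\mathfrak h}(\mathbb{W}_1)$ contains a dense subspace of ${\mathfrak h}(dS)$ and is closed by definition, hence ${\mathfrak h}(\mathbb{W}_1) = {\mathfrak h}(dS)$; transport back by a Lorentz transformation to get ${\mathfrak h}(\mathbb{W}) = {\mathfrak h}(dS)$ for arbitrary $\mathbb{W}$.

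Alternatively, and perhaps more cleanly, Step two can be bypassed entirely by appealing directly to Theorem~\ref{oprs} (the one-particle Reeh--Schlieder theorem): since $W_1$ is an open region in $dS$, ${\mathfrak h}(W_1) + i{\mathfrak h}(W_1)$ is already dense in ${\mathfrak h}(dS)$; but $W_1 \subset \mathbb{W}_1$ gives ${\mathfrak h}(W_1) \subset {\mathfrak h}(\mathbb{W}_1)$, and one then only needs that ${\mathfrak h}(\mathbb{W}_1)$ is \emph{complex}-linear, i.e.\ invariant under the complex structure ${\mathscr I}$. The main obstacle is exactly this last point: a priori ${\mathfrak h}({\mathcal O})$ for a proper region ${\mathcal O}$ is only a \emph{real} subspace (as the Remark after the construction of the complex structure emphasises), so ${\mathfrak h}(\mathbb{W}_1)$ need not obviously be ${\mathscr I}$-invariant. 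I expect the resolution to come from the time-zero description: under ${\mathbb U}$, the double wedge corresponds to the pair $(I_+, I_-)$ covering $S^1$ up to null sets, and the multiplication by $i$ in $\widehat{{\mathfrak h}}(S^1)$ — which is ${\mathbb U}^{-1}{\mathscr I}\,{\mathbb U}$ — mixes $\widehat{{\mathfrak h}}(I_+)$ and $\widehat{{\mathfrak h}}(I_-)$ but stays within their joint span (because $i\omega r$ sends real ${\mathcal D}_\mathbb{R}(I_\pm)$-functions to purely imaginary functions supported, after applying $\omega^{-1}$, in $I_\pm$). Making this precise — i.e.\ checking that ${\mathscr I}$ preserves ${\mathfrak h}(\mathbb{W}_1) = \overline{{\mathfrak h}(W_1) + {\mathfrak h}(W_1')}$, equivalently that the symplectic complement of ${\mathfrak h}(\mathbb{W}_1)$ is trivial — is the one genuinely non-routine step, and I would carry it out via the explicit formula \eqref{eqMagicForm} for $\omega$ and the characterisation of $\widehat{{\mathfrak h}}(I)$ in Definition~\ref{local-h-hat}.
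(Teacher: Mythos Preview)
Your first approach (Steps~1--3) is exactly the paper's proof, only spelled out more carefully. The paper compresses the whole argument into one sentence: the completion of ${\mathcal D}_{\mathbb C}\bigl(S^1\setminus\{-\tfrac{\pi}{2},\tfrac{\pi}{2}\}\bigr)$ in the scalar product~\eqref{eq:def-scalar-product-2} is $\widehat{\mathfrak h}(S^1)$, and then invokes Corollary~\ref{tztf} and Lorentz covariance. Your decomposition into the $h_1$-part and the $i\omega r\,h_2$-part, together with the observation that real and purely imaginary functions are $\Re\langle\,\cdot\,,\,\cdot\,\rangle_{\widehat{\mathfrak h}}$-orthogonal, is precisely what is implicit in that sentence. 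One remark: your justification for the density of the $i\omega r\,h_2$-part (``$\omega$ maps $C^\infty(S^1)$ into itself with dense range'') is not quite the right reason --- what you need is that no nonzero $g\in\widehat{\mathfrak h}(S^1)_{\mathbb R}$ is $\widehat{\mathfrak h}$-orthogonal to every $\omega h_2$, which unwinds to $g$ being a distribution in $\mathbb H^{-1/2}(S^1)$ supported on two points; since $\delta$-functions do not lie in $\mathbb H^{-1/2}$, this forces $g=0$. The paper leaves this implicit as well.

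Your ``alternative'' via Theorem~\ref{oprs} is a genuine detour, and you correctly diagnose its difficulty: Reeh--Schlieder only gives density of ${\mathfrak h}(W_1)+i\,{\mathfrak h}(W_1)$, so one still has to show that ${\mathfrak h}(\mathbb W_1)$ is invariant under the complex structure $\mathscr I$. As you anticipate, verifying $\mathscr I$-invariance reduces (via $\mathbb U$) to exactly the same time-zero density statement as in Step~2, so nothing is gained. The direct route is both shorter and what the paper does.
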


\begin{proof} 
The completion of 
${\mathcal D}_{\mathbb{C}} \left(S^1 \setminus \{ - \frac{\pi}{2} , 
\frac{\pi}{2}\} \right)$ with respect to the scalar product \eqref{eq:def-scalar-product-2}
is $\widehat {{\mathfrak h}} (S^1)$. Thus, by Corollary \ref{tztf}, one has  ${\mathfrak h} ({\mathbb W}_1)={\mathfrak h} (dS)$. 
The general result follows from ${\mathbb W}= \Lambda {\mathbb W}_1$ for some $ \Lambda \in 
SO_0(1,2)$.
\end{proof}

\chapter{Quantum One-Particle Structures}
\setcounter{equation}{0}

Given a classical dynamical system for the Klein--Gordon equation on the de Sitter space (in either the covariant or the canonical formulation)
there is a unique one-particle quantum system associated to it, characterised by 
the \emph{geodesic KMS condition}\index{geodesic KMS condition}.

\section{The covariant one-particle structure} 
\label{new-3.4b}

As we have seen, the Hilbert space ${\mathfrak h} (dS)$ carries an  (anti-)unitary irreducible representation $u$ of $O(1,2)$.

\label{umLambdapage2}

\begin{theorem} 
\label{1PStrucHe} 
Consider the identity map 
\label{covonepartpage}
	\begin{align*}
	\label{K1PStrucHe}
		K  \colon  \qquad {\mathfrak k}(dS) 
		& \rightarrow  {\mathfrak h} (dS) \nonumber \\
		  [f] 
		  & \mapsto   [f]\; , \qquad f  \in {\mathcal D}_{\mathbb{R}} (dS) \; . 
	\end{align*}
It follows that the triple $ \bigl(K , {\mathfrak h} (dS), u \bigr) $
is a {\em de Sitter one-particle structure} for the classical dynamical system 
$\bigl( {\mathfrak k}  (dS) , \sigma, {\mathfrak u}   \bigr)$. In other words, 
\begin{itemize}
\item [$ i.)$] $K$ defines a symplectic map from $({\mathfrak k}  (dS ) , \sigma)$ 
to $\left({\mathfrak h} (dS), \Im \langle \, . \, , 
\, . \, \rangle_{{\mathfrak h} (dS)} \right)$ and the image  
of ${\mathfrak k}  (dS )$ is dense in ${\mathfrak h} (dS)$;
\item [$ ii.)$] 
there exists a (anti-) unitary representation $u$ of $O(1,2)$ on ${\mathfrak h} (dS)$ satisfying 
	\begin{equation*} 
		\label{eqUHe}
		 u (\Lambda)\circ K = K \circ  {\mathfrak u} (\Lambda)  \;, 
		 \qquad \Lambda \in O(1,2)\; ;  
	\end{equation*}
\item [$ iii.)$]  for any wedge $W$, the \emph{geodesic KMS condition} holds: 
	\begin{equation} 
		\label{5.1}
		K {\mathfrak k}(W)  \subset {\mathscr D} \bigl( u (\Lambda_W ( i \pi)) \bigr) \; ,
	\end{equation}
and
	\begin{equation}
		\label{5.4}
		u (\Lambda_W ( i \pi)) [f] 
		= u (\Theta_{W}) [f] \; , 
		\qquad [f] \in K {\mathfrak k}(W) \; , 
	\end{equation}
$\Theta_{W}$ is the reflection on the edge of the 
wedge $W$.
\end{itemize}
\end{theorem}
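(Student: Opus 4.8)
The statement decomposes into three independent assertions, which I would address in order. For part $i.)$, the symplectic property is already essentially contained in the construction of ${\mathfrak h}(dS)$: by definition the scalar product on ${\mathfrak h}(dS)$ is $\langle[f],[g]\rangle_{{\mathfrak h}(dS)}=\langle\widetilde f_\nu,\widetilde g_\nu\rangle_{\widetilde{\mathfrak h}(\partial V^+)}$, and the normalisation constants in \eqref{mass-shell-ft} were chosen (see the discussion around \eqref{comfkt}) precisely so that $2\,\Im\langle[f],[g]\rangle_{{\mathfrak h}(dS)}={\mathcal E}(f,g)=\sigma([f],[g])$. So $K$ intertwines $\sigma$ with $\Im\langle\,\cdot\,,\,\cdot\,\rangle_{{\mathfrak h}(dS)}$ by construction; density of $K{\mathfrak k}(dS)$ in ${\mathfrak h}(dS)$ is immediate since ${\mathfrak h}(dS)$ is \emph{defined} as the completion of ${\mathfrak k}(dS)/\ker({\mathbb E}_\mu{\mathcal F}_+)$ (equivalently of ${\mathfrak h}^\circ(dS)$). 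Part $ii.)$ is Proposition \ref{UIR-FH} together with the proposition extending $u$ from $SO_0(1,2)$ to $O(1,2)$ via $u(T)$ and $u(P_2)$; one only has to observe that those results give exactly $u(\Lambda)\circ K=K\circ{\mathfrak u}(\Lambda)$ on representatives, since both sides act as $[f]\mapsto[\Lambda_*f]$.

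The substance of the theorem is part $iii.)$, the geodesic KMS condition, and this is where I would spend the effort. By $O(1,2)$-covariance (part $ii.)$) it suffices to treat a single wedge, and the natural choice is $W=W_1$ with $\Lambda_{W_1}(t)=\Lambda_1(t)$ and $\Theta_{W_1}=P_1T$. The plan is to transport everything to the light-cone picture: under ${\mathcal F}_{+\upharpoonright\nu}$ the representation $u(\Lambda_1(t))$ becomes $\widetilde u_\nu^+(\Lambda_1(t))$ acting on $\widetilde{\mathfrak h}_\nu(\partial V^+)$, and by \eqref{eqBooW} the matrix function $t\mapsto\Lambda_1(t)$ extends to an entire function with $\Lambda_1(i\pi)=P_1T$ (up to the explicit deformation described after \eqref{eqBooW}). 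For $[f]\in{\mathfrak k}(W_1)$, i.e.\ $f\in{\mathcal D}_\mathbb{R}(W_1)$, the associated plane-wave transform $\widetilde f_\nu(p)=\sqrt{\tfrac{c_\nu{\rm e}^{-\pi\nu}r}{\pi}}\int_{dS}{\rm d}\mu_{dS}(x)\,f(x)\,(x_+\cdot p)^{s^+}$ is, by Lemma \ref{t+2} and Lemma \ref{eqzWsx} (the holomorphy of $\tau\mapsto\Lambda^{(\alpha)}(\tau)x$ into the tuboid ${\mathcal T}_+$ for $x$ in the support of $f$), the boundary value of a function holomorphic in the strip ${\mathbb S}=\mathbb{R}-i(0,\pi)$ in the boost parameter. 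Concretely: $\widetilde u_\nu^+(\Lambda_1(t))\widetilde f_\nu(p)$ equals (a Jacobian factor times) $\widetilde f_\nu(\Lambda_1(-t)p)$; pushing $t\to t+i\theta$ corresponds to evaluating $(x_+\cdot\Lambda_1(t+i\theta)p)^{s^+}$, and since $x\in W_1$ the vector $\Lambda_1(i\theta)x$ lies in ${\mathcal T}_+$ for $0<\theta<\pi$ (this is exactly \eqref{tube-0}: $-\Im(\Lambda_1(i\theta)x)\cdot p>0$ for all $p\in\partial V^+$ when $x\in W_1$), so the relevant powers stay in the principal-value domain and the continuation exists and is continuous up to $\theta=\pi$. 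At $\theta=\pi$ one reads off $\Lambda_1(i\pi)p=(P_1T)p$, and using $T_*\overline f=f$ is \emph{not} available for general $f$; instead one uses that $P_1T$ preserves $\partial V^+$ and that the boundary value from ${\mathcal T}_+$ at the flipped point reproduces precisely $u(P_1T)[f]$ as defined by the pull-back. This gives \eqref{5.1} and \eqref{5.4} for $W=W_1$.

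The main obstacle, as I see it, is making the strip-analyticity argument rigorous at the level of the \emph{Hilbert space} ${\mathfrak h}(dS)$ rather than just pointwise in $p$: one must show that the vector-valued map $t\mapsto u(\Lambda_1(t))[f]\in{\mathfrak h}(dS)$ has a weakly (hence strongly) continuous extension to the closed strip, which is a statement about the $\widetilde{\mathfrak h}_\nu(\partial V^+)$-norm of the continued plane-wave transforms. Here I would exploit the Plancherel/Legendre-function representation of the two-point function (Proposition \ref{legendre}, \eqref{tpf-1}--\eqref{tpf-2}): the KMS property is equivalent to the statement that $t\mapsto\langle[g],u(\Lambda_1(t))[f]\rangle_{{\mathfrak h}(dS)}$ extends analytically to ${\mathbb S}$ with boundary value at $i\pi$ given by $\langle[g],u(P_1T)[f]\rangle$, and for $f,g$ supported in $W_1$ this is a property of $t\mapsto P_{s^+}\big(\tfrac{(\Lambda_1(t)x)\cdot y}{r^2}\big)$, which one controls via the known analyticity domain of the Legendre function $P_{s^+}$ on $\mathbb{C}\setminus(-\infty,-1]$ (Remark \ref{Wdomain}) and the fact that $x,y\in W_1$ keeps $\tfrac{(\Lambda_1(t)x)\cdot y}{r^2}$ in that cut plane for $\Im t\in(0,\pi)$. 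Uniform bounds on ${\mathcal D}_\mathbb{R}(W_1)$-test functions then upgrade the weak statement to the domain/identity claims \eqref{5.1} and \eqref{5.4}. Finally, an alternative and perhaps cleaner route for the same step would be to first prove the canonical version of the one-particle structure on $\widehat{\mathfrak h}(S^1)$ using the explicit formula \eqref{eqMagicFormB} for the scalar product in terms of $\coth\pi|\varepsilon|$ and $(\sinh\pi|\varepsilon|)^{-1}(P_1)_*$ — which manifestly encodes the KMS relation at inverse temperature $2\pi$ for the boost generator $\varepsilon$ — and then transport it to the covariant picture via the unitary ${\mathbb U}$ and the results of Section \ref{CCDS}; I would present the light-cone argument as the primary proof and mention this canonical cross-check as a remark.
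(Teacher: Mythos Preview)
Your proposal is correct, and parts $i.)$ and $ii.)$ match the paper's proof essentially verbatim: symplecticity from \eqref{comfkt}, density from the definition of ${\mathfrak h}(dS)$ as a completion, and the intertwining relation from the fact that both $u$ and ${\mathfrak u}$ act by pullback on representatives.

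For part $iii.)$ your plan is correct but more elaborate than the paper's. You propose the light-cone/plane-wave argument as primary, identify the obstacle of lifting pointwise analyticity in $p$ to Hilbert-space analyticity, and then resolve it via the Legendre/two-point-function representation --- with the canonical $\widehat{\mathfrak h}(S^1)$ picture as a cross-check. The paper skips the light-cone detour entirely and goes straight to what you call the secondary option: it writes $\langle [f],u(\Lambda_1(t))[g]\rangle_{{\mathfrak h}(dS)}$ directly as the position-space integral $\iint f(x_1)\,{\mathcal W}^{(2)}(x_1,\Lambda_1(t)x_2)\,g(x_2)$, invokes the tuboid analyticity of ${\mathcal W}^{(2)}$ (Theorem~\ref{prop:4.1}) to continue into the strip, and reads off the boundary value at $i\pi$ using $\Lambda_1(i\pi)=TP_1$ from \eqref{eqBooW}. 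Since this holds for a total set of vectors $[f]$, both \eqref{5.1} and \eqref{5.4} follow at once. Your light-cone route would work, but the obstacle you flagged is precisely why the position-space two-point-function argument is cleaner; the paper does not need the canonical cross-check at all here.
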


\begin{proof} $K$ is well-defined, as $\ker {\mathcal F}_{+ \upharpoonright \nu} = {\ker {\mathbb P}}$.

\begin{itemize}
\item[$ i.)$] It follows from 	
	\begin{equation*}
	\label{comfkt-3}
	{\mathscr E} (  x_1, x_2) = 2 \Im {\mathcal W}^{(2)}_m(  x_1,  x_2) \;  
	\end{equation*} 
 that $K$ is symplectic; see \eqref{comfkt}. Recall that ${\mathfrak h}^\circ (dS) 
$ is dense in ${\mathfrak h} (dS)$;

\item[$ ii.)$]  Both ${\mathfrak u} (\Lambda)$ and $u (\Lambda)$ are induced by the pullback action 
on the test functions: for $f \in {\mathcal D}_{\mathbb{R}} (dS)$
	\begin{align*}  
		\qquad K \circ  {\mathfrak u} (\Lambda)\; [f] &
		= K \circ {\mathbb P} (\Lambda_* f)
		=  [\Lambda_* f] 
		\nonumber \\
		&= u (\Lambda) [f]  
		= u (\Lambda)\circ K \, [f]  \;, 
		 \quad \Lambda \in O(1,2)\; .  \quad
	\end{align*} 
The second but last identity follows from the definition of the Fourier-Helgason transform 
(see \eqref{mass-shell-ft}), and Proposition \ref{UIR-FH}.

\item[$ iii.)$]  One can read of from \eqref{eqBooW} that
	\[
		\Lambda_1( t+i \pi ) = \Lambda_1( t) TP_1 \; . 
	\]
Now, let $f, g \in  {\mathcal D}_{\mathbb{R}} (W_1)$. It 
follows that the map 
	\begin{align*}
	 	t \mapsto & \langle [f] ,
				u (\Lambda_W ( t)) [g] \rangle_{\mathfrak{h}(dS)} \\
				& = \langle [f] , [\Lambda_1 (t)_* g] \rangle_{\mathfrak{h}(dS)} 
		\\
		& =
			\int_{dS \times dS} {\rm d} \mu_{dS} ( x_1) {\rm d} \mu_{dS} (  x_2)   { f ( x_1 ) } 
				{\mathcal W}^{(2)} (  x_1 ,  x_2 ) g ( \Lambda_1^{-1}(t) x_2) \\
				& =
			\int_{dS \times dS} {\rm d} \mu_{dS} ( x_1) {\rm d} \mu_{dS} (  x_2)   { f ( x_1 ) } 
				{\mathcal W}^{(2)} (  x_1 ,  \Lambda_1 (t) x_2 ) g (  x_2) 
	\end{align*}
allows an analytic continuation into the strip $\{ t \in \mathbb{C} \mid 0< \Im t <  \pi \, r \}$ with continuous boundary values. 
The boundary values are 
	\begin{align*}
	 	 \langle [f] , & \; [\Lambda_1 (i \pi r)_* g] \rangle_{\mathfrak{h}(dS)} \\
		& =
			\int_{dS \times dS} {\rm d} \mu_{dS} ( x_1) {\rm d} \mu_{dS} (  x_2)   { f ( x_1 ) } 
				{\mathcal W}^{(2)} (  x_1 ,  TP_1 x_2 ) g ( x_2) \\
				& =
			\int_{dS \times dS} {\rm d} \mu_{dS} ( x_1) {\rm d} \mu_{dS} (  x_2)   { f ( x_1 ) } 
				{\mathcal W}^{(2)} (  x_1 ,  x_2 ) g ( TP_1 x_2) \\
			& = \langle [f] ,u (TP_1) [g] \rangle_{\mathfrak{h}(dS)} \; . 
	\end{align*}
This identity holds for the total set of vectors $\{ [f] \in \mathfrak{h}(dS) \mid f \in  {\mathcal D}_{\mathbb{R}} (W_1)$. It follows 
that the identity \eqref{5.4} holds. 
\end{itemize}
\end{proof}

The space ${\mathfrak h}(W)$ is a real subspace in ${\mathfrak h}(dS)$. Moreover, ${\mathfrak h}(W)
+ i {\mathfrak h}(W)$ is dense in ${\mathfrak h}(dS)$ and ${\mathfrak h}(W)  \cap i {\mathfrak h}(W)  = \{0\}$. 
Thus one can define, following Eckmann and Osterwalder \cite{EO} (see also \cite{LRT}), a closeable operator 
	\begin{equation} 
		\begin{matrix}
			s_{\scriptscriptstyle W} \colon & {\mathfrak h}(W) & 
				+ & i {\mathfrak h}(W)  
				& \to&  {\mathfrak h}(W) & 
				+ & i {\mathfrak h}(W)  \\
				& f & + & i g & \mapsto & f & - & i g
						\end{matrix} \; \; . 
	\end{equation}
The polar decomposition of its closure $\overline {s}_{\scriptscriptstyle W} = j_{\scriptscriptstyle W} \delta_{\scriptscriptstyle W}^{1/2}$
provides 
\begin{itemize}
\item[---] an anti-unitary involution (\emph{i.e.}, a conjugation)
	\begin{equation} 
		\begin{matrix}
		j_{\scriptscriptstyle W} \colon & {\mathfrak h}\oplus \overline{{\mathfrak h}}  & \to&  {\mathfrak h}\oplus \overline{{\mathfrak h}}  \\
			& f  \oplus   g & \mapsto &    \overline{g}  \oplus   \overline{ f }
		\end{matrix}  \; \;  ; 
	\end{equation}
\item[---] a complex linear, positive operator $\delta^{1/2}_{\scriptscriptstyle W} $.  
\end{itemize}

\begin{theorem}[One-particle Bisognano-Wichmann theorem]
\label{one-p-BW}
The one-particle Tomita operator $s_{\scriptscriptstyle W_1}$  has the polar decomposition 
	\begin{equation} 
		\label{eqSPolarDecomp}
			s_{\scriptscriptstyle W_1} = u (TP_1)\, u\bigl(\Lambda_1( i\pi) \bigr) \; .
	\end{equation}
\end{theorem}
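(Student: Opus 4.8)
The plan is to reduce the statement to the geodesic KMS condition established in Theorem~\ref{1PStrucHe}. Recall that the Tomita operator $s_{\scriptscriptstyle W_1}$ is by construction the closure of $f + ig \mapsto f - ig$ on ${\mathfrak h}(W_1) + i{\mathfrak h}(W_1)$. The modular theory of standard subspaces (à la Eckmann--Osterwalder, Rieffel--van Daele) tells us that once we identify a candidate polar decomposition $s_{\scriptscriptstyle W_1} = J\Delta^{1/2}$ with $J$ an anti-unitary involution, $\Delta^{1/2}$ positive, $J\Delta^{1/2}J = \Delta^{-1/2}$, and such that $\Delta^{it}{\mathfrak h}(W_1) = {\mathfrak h}(W_1)$ and $J{\mathfrak h}(W_1) = {\mathfrak h}(W_1)'$ (the symplectic complement), uniqueness of the polar decomposition forces the identification. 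So the task splits into: (a)~showing the real subspace ${\mathfrak h}(W_1)$ is \emph{standard}, i.e.\ ${\mathfrak h}(W_1) + i{\mathfrak h}(W_1)$ is dense and ${\mathfrak h}(W_1) \cap i{\mathfrak h}(W_1) = \{0\}$ — the density is already recorded right before the theorem (and follows from the one-particle Reeh--Schlieder Theorem~\ref{oprs}), while separating property requires a short argument; and (b)~verifying that $u(\Lambda_1(i\pi))$, suitably interpreted as $u(\Lambda_1)(i\pi)$ via the analytic continuation of the one-parameter boost group $t \mapsto u(\Lambda_{W_1}(t))$, is the modular operator $\Delta^{1/2}_{\scriptscriptstyle W_1}$, and $u(TP_1)$ is the modular conjugation $j_{\scriptscriptstyle W_1}$.

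\textbf{Key steps.} First I would set $\Delta^{it}_{\scriptscriptstyle W_1} \doteq u(\Lambda_{W_1}(-t))$ (with the appropriate normalization of the boost to match the factor of $r$, as in \eqref{5.1}), which is a unitary one-parameter group since $u$ is a unitary representation of $SO_0(1,2)$; by the geodesic covariance ${\mathfrak u}(\Lambda_{W_1})$ leaves ${\mathfrak k}(W_1)$ invariant (boosts preserve the wedge), hence $u(\Lambda_{W_1}(t)){\mathfrak h}(W_1) = {\mathfrak h}(W_1)$ for all real $t$. Second, I would set $J \doteq u(TP_1)$; since $TP_1 = \Theta_{W_1}$ is the edge reflection, it is an anti-unitary operator with $(TP_1)^2 = \mathbb{1}$, so $J$ is a conjugation, and $J{\mathfrak h}(W_1) = {\mathfrak h}(\Theta_{W_1}W_1) = {\mathfrak h}(W_1')$, which is the symplectic complement of ${\mathfrak h}(W_1)$ by Lemma~\ref{Lm4-9} (bolstered by Corollary~\ref{WdSprop}). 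Third — the heart of the matter — I would show that the Tomita operator of the standard pair $({\mathfrak h}(W_1), {\mathfrak h}(dS))$ coincides with the closure of $u(TP_1) u(\Lambda_{W_1}(i\pi))$ on $K{\mathfrak k}(W_1)$. For $[f] \in K{\mathfrak k}(W_1)$, the geodesic KMS relation \eqref{5.4} gives $u(\Lambda_{W_1}(i\pi))[f] = u(\Theta_{W_1})[f] = u(TP_1)[f]$; moreover \eqref{5.1} says $[f]$ lies in the domain of $u(\Lambda_{W_1}(i\pi))$, the analytic continuation to $t = i\pi r$ of the boost orbit. From the two-point-function representation \eqref{tpf-1}/\eqref{tpf-2} and the analyticity in the tuboids ${\mathcal T}_\pm$ one gets that for $f, g \in {\mathcal D}_\mathbb{R}(W_1)$ the matrix element $t \mapsto \langle [f], u(\Lambda_{W_1}(t))[g]\rangle$ continues analytically to the strip $0 < \Im t < \pi r$ with the boundary identity $\langle [f], u(\Lambda_{W_1}(i\pi r))[g]\rangle = \langle [f], u(TP_1)[g]\rangle$ — exactly the computation carried out in part~$iii.)$ of the proof of Theorem~\ref{1PStrucHe}. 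This shows $u(TP_1) u(\Lambda_{W_1}(i\pi))[g] = [\overline{g}]^{\text{refl}}$ is precisely the Tomita conjugation applied to real-subspace vectors, and since these are a core, the closures agree; uniqueness of the polar decomposition of $\overline{s}_{\scriptscriptstyle W_1}$ (with $J$ anti-unitary involution commuting appropriately and $\Delta^{1/2} = u(\Lambda_1(i\pi))$ positive — positivity coming from $\Delta^{it}$ being a continuous unitary group with positive self-adjoint generator on the wedge subspace, the one-particle KMS/positivity) then yields \eqref{eqSPolarDecomp}.

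\textbf{Main obstacle.} The routine parts (density, the symplectic-complement property of $J$, anti-unitarity of $u(TP_1)$) are quick. The delicate point is the \emph{domain} bookkeeping: one must check that $u(\Lambda_1(i\pi))$, defined via analytic continuation of the one-parameter unitary boost group, has $K{\mathfrak k}(W_1)$ as a core, that this operator is genuinely closed and positive, and that $u(TP_1) u(\Lambda_1(i\pi))$ with the product of domains is closeable with closure equal to $\overline{s}_{\scriptscriptstyle W_1}$ — not merely that the two operators agree on the dense subspace $K{\mathfrak k}(W_1)$. Here one leans on the standard fact (valid because $u(\Lambda_{W_1})$ is a strongly continuous unitary group leaving the closed real subspace ${\mathfrak h}(W_1)$ invariant and acting with the right KMS behaviour) that the boundary values of analytic vectors exhaust the modular domain; this is the one-particle version of the Bisognano--Wichmann analysis of Brunetti--Guido--Longo~\cite{BGL}, and I would invoke their abstract result to conclude, with the geodesic KMS condition \eqref{5.1}--\eqref{5.4} of Theorem~\ref{1PStrucHe} supplying precisely the hypothesis their theorem requires. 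The separating property ${\mathfrak h}(W_1) \cap i{\mathfrak h}(W_1) = \{0\}$ I would obtain by noting that a nonzero vector in the intersection would be fixed by the Tomita conjugation and its negative simultaneously, or more directly from the fact that $\delta^{1/2}_{\scriptscriptstyle W_1}$ has trivial kernel because the boost generator has purely continuous spectrum — consistent with the Lemma preceding \eqref{GHK}.
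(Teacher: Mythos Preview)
Your proposal is correct and the essential idea---reducing to the geodesic KMS condition \eqref{5.4} of Theorem~\ref{1PStrucHe}---is exactly what the paper does. The paper's execution is much shorter, however: from \eqref{5.4} one has $u(\Lambda_1(i\pi))[f] = u(TP_1)[f]$ for real $[f] \in {\mathfrak h}^\circ(W_1)$; since $u(\Lambda_1(i\pi))$ is complex-linear and $u(TP_1)$ anti-linear, this immediately yields $u(TP_1)\,u(\Lambda_1(i\pi))\bigl([f]+i[g]\bigr) = [f]-i[g] = s_{\scriptscriptstyle W_1}\bigl([f]+i[g]\bigr)$ on ${\mathfrak h}^\circ(W_1)+i{\mathfrak h}^\circ(W_1)$, and the boost-invariance of ${\mathfrak h}^\circ(W_1)$ makes it a core for $u(\Lambda_1(i\pi))$, so the closed operators coincide. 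Your additional scaffolding (standardness, symplectic-complement identification, BGL invocation) is correct but unnecessary for this direct argument; the domain issue you flag as the ``main obstacle'' is dispatched by the paper in one line via the core property, and the positivity of $u(\Lambda_1(i\pi)) = e^{-\pi L_1}$ together with the commutation $u(TP_1)L_1 u(TP_1)^{-1} = -L_1$ (from $TP_1\Lambda_1(t)TP_1 = \Lambda_1(t)$ and anti-unitarity) guarantees that $u(TP_1)\,u(\Lambda_1(i\pi))$ really is a polar decomposition.
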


\begin{proof} According to Theorem \ref{1PStrucHe} $iii.)$ we have 
	\[
		u (TP_1) ( [f] + i [g]) = u (\Lambda_1 ( i \pi)) ([f] +i [g])\; , 
		\qquad [f] , [g] \in {\mathfrak h}^\circ (W_1) \; , 
	\]
Since $u (TP_1)$ is idempotent and anti-linear, this implies   
	\[ 
		( [f] -  i [g]) = u (TP_1) u (\Lambda_1( i\pi)) ( [f] + i [g]) \; , 
		\qquad [f] , [g] \in {\mathfrak h}^\circ (W) \; . 
	\]
The left hand side coincides with $s_{\scriptscriptstyle W_1} ( [f] + i [g])$. 
The space ${\mathfrak h}^\circ (W)$ is invariant under $u (\Lambda_1(t))$ and therefore is a core  for
$u(\Lambda_1(i\pi))$. Therefore the above equation implies that
$s_{\scriptscriptstyle W_1}$ has the polar decomposition~\eqref{eqSPolarDecomp}. 
\end{proof}

\color{black}

\begin{corollary} The
the quadrupel $ \bigl(K , {\mathfrak h} (\mathbb{W}), u(\Lambda_W (\tfrac{.}{r})), 
u (\Theta_{W})\bigr) $, with $W$ an arbitrary wedge, forms a double $2\pi r$-KMS one-particle 
structure for the classical double dynamical system 
$\bigl( {\mathfrak k}  (\mathbb{W} ) , \sigma, {\mathfrak u} (\Lambda_W (\tfrac{.}{r})),  {\mathfrak u} (\Theta_{W})  \bigr)$ in the 
sense of~\ref{dbops}.
\end{corollary}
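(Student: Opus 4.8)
The goal is to verify that the quadruple $\bigl(K , {\mathfrak h} (\mathbb{W}), u(\Lambda_W(\tfrac{.}{r})), u(\Theta_W)\bigr)$ satisfies the defining conditions of a double $2\pi r$-KMS one-particle structure (the list referenced by~\ref{dbops}). The plan is to assemble this entirely from results already established for the wedge $W_1$ and then transport everything by the action of $SO_0(1,2)$, using transitivity on wedges, $W = \Lambda W_1$. First I would reduce to $W = W_1$: since every wedge is of the form $\Lambda W_1$, and since by Proposition~\ref{UIR-FH} the representation $u$ acts geometrically ($u(\Lambda){\mathfrak h}({\mathcal O}) = {\mathfrak h}(\Lambda{\mathcal O})$), and since $\Lambda_W(t) = \Lambda\Lambda_1(t)\Lambda^{-1}$, $\Theta_W = \Lambda P_1T\Lambda^{-1}$, all structural properties for general $W$ follow by conjugating the corresponding statements for $W_1$ by $u(\Lambda)$. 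So the whole burden is the case $W = W_1$.

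\textbf{The pieces for $W_1$.} For the double wedge $\mathbb{W}_1$, Corollary~\ref{WdSprop} gives ${\mathfrak h}(\mathbb{W}_1) = {\mathfrak h}(dS)$, so the ambient Hilbert space is the full one-particle space and $K{\mathfrak k}(\mathbb{W}_1)$ is dense in it; symplecticity of $K$ is already in Theorem~\ref{1PStrucHe}~$i.)$. The existence of the (anti-)unitary representation $u$ of $O(1,2)$ implementing the geometric action, hence in particular the one-parameter group $t\mapsto u(\Lambda_1(t/r))$ and the conjugation $u(\Theta_{W_1}) = u(P_1T)$, is Theorem~\ref{1PStrucHe}~$ii.)$. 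The geodesic KMS condition in the form needed, namely $K{\mathfrak k}(W_1) \subset {\mathscr D}\bigl(u(\Lambda_1(i\pi r))\bigr)$ together with the identity $u(\Lambda_1(i\pi r))[f] = u(P_1T)[f]$ for $[f] \in K{\mathfrak k}(W_1)$, is exactly Theorem~\ref{1PStrucHe}~$iii.)$ (note the rescaling: $\Lambda_W(\tfrac{.}{r})$ reaches the reflection at imaginary parameter $i\pi r$, i.e. $\Lambda_1(i\pi)$ in the unrescaled parametrization, consistent with \eqref{5.1}--\eqref{5.4}). The one-particle Bisognano--Wichmann theorem, Theorem~\ref{one-p-BW}, identifies the Tomita operator $s_{W_1}$ of the pair $({\mathfrak h}(W_1), {\mathfrak h}(dS))$ with $u(P_1T)\,u(\Lambda_1(i\pi))$, which is precisely the modular data a double $2\pi r$-KMS structure is required to produce. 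Finally, the ``double'' part — that $W_1$ and its opposite $W_1'$ give two subspaces ${\mathfrak h}(W_1)$, ${\mathfrak h}(W_1')$ in symplectic position with the right behaviour under $\Lambda_1(t/r)$ and $\Theta_{W_1}$ — follows from Proposition~\ref{ssdd} (the canonical double dynamical system on $\widehat{\mathfrak k}(I_+)\oplus\widehat{\mathfrak k}(I_-)$) transported to ${\mathfrak h}(dS)$ via the unitary $\mathbb{U}$ of Proposition before Corollary~\ref{WdSprop}, together with the intertwining relation $u(\Lambda_{W'}(t)) = u(\Lambda_W(-t))$ noted in Section~\ref{isometrygroup}, which gives the geodesic flow the opposite (past-directed) orientation in $W_1'$ as the axioms of~\ref{dbops} demand.

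\textbf{Assembling.} Concretely I would proceed: (1) state that by Corollary~\ref{WdSprop} the underlying space is ${\mathfrak h}(dS)$; (2) invoke Theorem~\ref{1PStrucHe} wholesale to get the symplectic map, the covariance, and the geodesic KMS condition \eqref{5.1}--\eqref{5.4}; (3) invoke Theorem~\ref{one-p-BW} for the polar decomposition of the Tomita operator, which is the content of ``$2\pi r$-KMS''; (4) use Proposition~\ref{ssdd} together with the map $\mathbb{U}$ and the relation $\Lambda_{W'}(t) = \Lambda_W(-t)$ to check the two-sided (``double'') axioms: disjointness/symplectic orthogonality of ${\mathfrak h}(W_1)$ and ${\mathfrak h}(W_1')$, invariance of each under the respective boost, and the exchange ${\mathfrak h}(W_1) \leftrightarrow {\mathfrak h}(W_1')$ under $u(\Theta_{W_1})$ with commutation $[u(\Lambda_1(t/r)), u(\Theta_{W_1})] = 0$; (5) conjugate by $u(\Lambda)$ to pass from $W_1$ to an arbitrary wedge $W = \Lambda W_1$.

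\textbf{Expected main obstacle.} The routine analytic and covariance inputs are all packaged in the cited theorems, so the only genuine work is checking that the abstract axiom list of~\ref{dbops} (a double $2\pi r$-KMS one-particle structure) is matched clause-by-clause — in particular reconciling the normalization of the modular parameter ($2\pi r$ versus $\pi r$ versus the unrescaled $i\pi$ appearing in \eqref{eqSPolarDecomp}) and confirming that the past-directedness of the Killing flow in the opposite wedge is exactly what the ``double'' axioms stipulate. I expect no new estimate is needed; the subtlety is purely bookkeeping of conventions, and I would discharge it by pointing to \eqref{eqBooW}, \eqref{5.4}, Theorem~\ref{one-p-BW}, and the remark $\Lambda_W(t) = \Lambda_{W'}(-t)$, then simply citing~\ref{dbops}.
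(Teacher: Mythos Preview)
Your approach is in the right spirit and largely parallels the paper's proof, but it has one genuine gap and some unnecessary detours. The paper simply runs through the four clauses of Definition~\ref{dbops} directly for a general wedge~$W$ (no reduction to $W_1$ is needed, since Theorem~\ref{1PStrucHe} is already stated for arbitrary wedges): clause~i.) via Corollary~\ref{WdSprop}; clause~ii.) via Theorem~\ref{1PStrucHe}~i.) for symplecticity and the one-particle Reeh--Schlieder Theorem~\ref{oprs} for density; clause~iii.) via Theorem~\ref{1PStrucHe}~ii.) plus~\eqref{5.1}, together with the remark that the boost generator, being unitarily equivalent to~\eqref{boostgenerator}, has no zero eigenvalue; clause~iv.) directly from~\eqref{5.4}.

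The gap in your argument is in the density condition of clause~ii.): Definition~\ref{dbops} requires that $K{\mathfrak k}_L + iK{\mathfrak k}_L$ be dense, where ${\mathfrak k}_L = {\mathfrak k}(W)$ is the \emph{single} wedge; you only argue that $K{\mathfrak k}(\mathbb{W}_1)$ is (real-)dense, via Corollary~\ref{WdSprop}. These are different statements, and the single-wedge complex density needs precisely Theorem~\ref{oprs}, which you never invoke. Your step~(3), invoking Theorem~\ref{one-p-BW}, is harmless but redundant: clause~iv.) asks for $\underline{j}\,\underline{\delta}^{1/2}f = f$ on $K{\mathfrak k}(W)$, and this is literally~\eqref{5.4} once you unwind the anti-unitarity of $u(\Theta_W)$. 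Your step~(4) is misdirected: the properties you list there (symplectic orthogonality of ${\mathfrak h}(W)$ and ${\mathfrak h}(W')$, invariance under the boost, exchange under $\Theta_W$) belong to Definition~\ref{dcldsy} of the \emph{classical} double system, which is part of the hypothesis of the corollary, not something the quantum structure has to reproduce; moreover Proposition~\ref{ssdd} and the map~$\mathbb{U}$ of Proposition~\ref{Prop5.7} appear later in the text and are not needed here at all.
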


\begin{proof}
We verify the properties listed in~\ref{dbops}: 
\begin{itemize}
\item [$ a.)$] ${\mathfrak h} (\mathbb{W})$ is a complex Hilbert space;  in fact, it equals ${\mathfrak h} (dS)$, 
see Proposition~\ref{WdSprop}.
\item [$ b.)$] The map $K \colon  {\mathfrak k} (\mathbb{W}) \to {\mathfrak h} (\mathbb{W})$ 
is  real linear and symplectic (Theorem~\ref{1PStrucHe}~(i)). Moreover,   
	\[
		K {\mathfrak k} (W) 
		+ i \, K  {\mathfrak k} (W) 
		= {\mathfrak h}^\circ (W) +  i \, {\mathfrak h}^\circ (W)  
	\]
is dense in ${\mathfrak h} (dS) = {\mathfrak h} (\mathbb{W})$. This follows from Theorem \ref{oprs}. 

\item [$ c.)$] 
$t \mapsto u (\Lambda_W(t)) $ is a strongly continuous  one-parameter group of unitaries, and 
	\begin{equation}
		u (\Lambda_W(t)) \circ K  
		= K  \circ   {\mathfrak u} (\Lambda_W(t) )   \; . 
		\label{eqUeps2}
	\end{equation}
This is a special case of (ii). By construction, the generator of the boost $ t \mapsto \Lambda_W(t)$ is unitarily equivalent to 
\eqref{boostgenerator}. It has no zero eigenvalue and according to \eqref{5.1}
	\[
	 \bigl( K {\mathfrak k} (W) + i \,   K {\mathfrak k} (W) \bigr) 
		\subset {\mathscr D} \bigl( u (\Lambda_W(i \pi)) \bigr) \; ; 
	\]
\item [$d.)$] $u (\Theta_{W})$ is a conjugation, and 
	\[
		u (\Theta_{W}) \circ K  =
                K  \circ  {\mathfrak u} (\Theta_{W}) \; .  
	\] 
The pre-Bisognano-Wichmann condition (see \cite[p. 75]{Kay3}) holds:  
	\[
		u (\Lambda_W(i \pi)) \, K [f] =  u (\Theta_{W}) \, 
		 K [f] \; , \qquad [f] \in {\mathfrak k}(W_1) .
	\]
Both properties follow from Theorem \ref{1PStrucHe} $iii.)$ and the fact that $\Theta_{W} \in O(1,2)$. 
\end{itemize}
\end{proof}

\goodbreak
\section{One-particle structures with positive and negative energy}
\label{posneg}

Let $\widehat {\mathfrak  d} (S^1)$ be the completion of ${\mathcal D}_{\mathbb{C}} \left(S^1 \setminus \{ - \frac{\pi}{2}, 
\frac{\pi}{2}\} \right)$ with respect to the
scalar product 
\label{hfdpage}
	\begin{equation}
		\label{eqSkalProd}
		\langle h_1,h_2 {\rangle}_{\widehat {\mathfrak  d} (S^1)}
		 \doteq \langle h_1,(2 | \varepsilon| )^{-1}h_2 
		\rangle_{L^2(S^1, |\cos \psi |^{-1} r {\rm d} \psi)} \; ,   
	\end{equation}
for $ h_1, h_2 \in {\mathcal D}_{\mathbb{C}} \left(S^1 \setminus \{ - \frac{\pi}{2}, \frac{\pi}{2}\} \right)$.  Let $\widehat {\mathfrak  d} (I_\pm)$ be the completion of 
${\mathcal D}_{\mathbb{C}} (I_\pm)$ with respect to the scalar product~\eqref{eqSkalProd}. Then
	\[
		\widehat {\mathfrak  d} (S^1) = \widehat {\mathfrak  d} (I_+) \oplus \widehat {\mathfrak  d} (I_-) \; . 
	\]
This follows from Eq.~\eqref{4.9} and Lemma \ref{Lm3.5}. 

\begin{proposition} 
\label{OnePartdS} 
Let $\widehat{K}_\infty \colon \widehat {\mathfrak k}  (S^1) \to \widehat {\mathfrak  d} (S^1) $ be the map given by 
	\begin{equation}
		\label{eqKdotf} 
			\bigl( \widehat{K}_\infty (\mathbb{\phi}, \mathbb{\pi}) \bigr) (\psi)  
				\doteq   \cos \psi \; \mathbb{\pi} (\psi)-i
                                \,(\varepsilon \mathbb{\phi}) (\psi)  \; . 
	\end{equation}
Then $\bigl( \widehat{K}_\infty,\widehat {\mathfrak  d} (S^1) , {\rm e}^{it\varepsilon } \bigr)$
forms a one-particle structure for the classical 
dynamical system $\bigl(\, \widehat{\mathfrak k} (S^1) ,\widehat \sigma, \widehat   {\mathfrak u}(\Lambda_1(t))\bigr)$.
\end{proposition}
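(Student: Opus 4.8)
The statement asserts that the triple $\bigl(\widehat K_\infty, \widehat{\mathfrak d}(S^1), {\rm e}^{it\varepsilon}\bigr)$ is a one-particle structure for the canonical classical system $\bigl(\widehat{\mathfrak k}(S^1),\widehat\sigma,\widehat{\mathfrak u}(\Lambda_1(t))\bigr)$, so I need to verify the three defining properties of a one-particle structure in the analogous form to Theorem~\ref{1PStrucHe}: (i) $\widehat K_\infty$ is real-linear and symplectic from $(\widehat{\mathfrak k}(S^1),\widehat\sigma)$ into $\bigl(\widehat{\mathfrak d}(S^1),\Im\langle\cdot,\cdot\rangle_{\widehat{\mathfrak d}(S^1)}\bigr)$ with dense range; (ii) the intertwining relation ${\rm e}^{it\varepsilon}\circ\widehat K_\infty=\widehat K_\infty\circ\widehat{\mathfrak u}(\Lambda_1(t))$ holds; (iii) the geodesic KMS condition for the wedge $W_1$. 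The first step is to unpack the definitions: $\widehat K_\infty(\phi,\pi)(\psi)=\cos\psi\,\pi(\psi)-i(\varepsilon\phi)(\psi)$, and the target scalar product is $\langle h_1,h_2\rangle_{\widehat{\mathfrak d}(S^1)}=\langle h_1,(2|\varepsilon|)^{-1}h_2\rangle_{L^2(S^1,|\cos\psi|^{-1}r\,{\rm d}\psi)}$.

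\textbf{Symplecticity.} For (i) I would compute, for $(\phi_1,\pi_1),(\phi_2,\pi_2)\in\widehat{\mathfrak k}(S^1)$, the imaginary part of
$\langle\widehat K_\infty(\phi_1,\pi_1),\widehat K_\infty(\phi_2,\pi_2)\rangle_{\widehat{\mathfrak d}(S^1)}$. Expanding $(\cos_\psi\pi_1-i\varepsilon\phi_1)$ against $(2|\varepsilon|)^{-1}(\cos_\psi\pi_2-i\varepsilon\phi_2)$ in $L^2(S^1,|\cos\psi|^{-1}r\,{\rm d}\psi)$, the real cross terms cancel by symmetry of $(2|\varepsilon|)^{-1}$ and $\varepsilon$, and the imaginary part reduces to $\tfrac12\langle\cos_\psi\pi_1,|\varepsilon|^{-1}\varepsilon\phi_2\rangle-\tfrac12\langle\varepsilon\phi_1,|\varepsilon|^{-1}\cos_\psi\pi_2\rangle$ (with weight $|\cos\psi|^{-1}r\,{\rm d}\psi$). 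Here I use the crucial operator identity $\cos_\psi^{-1}\varepsilon|\varepsilon|^{-1}=|\cos_\psi|^{-1}$ already noted several times in the excerpt (e.g.\ after \eqref{3.49} and after \eqref{pmfhutableitung}); it converts the weighted $L^2$-pairing into the unweighted pairing on $L^2(S^1,r\,{\rm d}\psi)$ and turns the expression into $\langle\phi_1,\pi_2\rangle_{L^2(S^1,r\,{\rm d}\psi)}-\langle\pi_1,\phi_2\rangle_{L^2(S^1,r\,{\rm d}\psi)}=\widehat\sigma\bigl((\phi_1,\pi_1),(\phi_2,\pi_2)\bigr)$, which is \eqref{eqSplFormHat}. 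For density of the range: since $\widehat{\mathfrak d}(S^1)=\widehat{\mathfrak d}(I_+)\oplus\widehat{\mathfrak d}(I_-)$ and $\varepsilon$ does not mix $I_+$ and $I_-$ (Lemma~\ref{Lm3.5}), it suffices to note that as $(\phi,\pi)$ ranges over compactly supported smooth Cauchy data on a half-circle, $\cos_\psi\pi-i\varepsilon\phi$ already spans a dense set — the real parts $\cos_\psi\pi$ exhaust $\cos_\psi{\mathcal D}(I_\pm)={\mathcal D}(I_\pm)$ which is dense by \eqref{4.9}, and the imaginary parts $\varepsilon\phi$ likewise, so even the real and imaginary parts separately are dense and the complex span certainly is.

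\textbf{Intertwining and KMS.} For (ii) I would use Proposition~\ref{porp4.13}: the boost $\widehat{\mathfrak u}(\Lambda_1(t))$ sends $(\phi,\pi)\mapsto(\phi_t,\pi_t)$ with $\phi_t=\cos(\varepsilon t)\phi-\sin(\varepsilon t)\varepsilon^{-1}\cos_\psi\pi$ and $\pi_t=(r\cos_\psi)^{-1}\bigl(\varepsilon\sin(\varepsilon t)\phi+\cos(\varepsilon t)\cos_\psi\pi\bigr)$ (with $r$ normalization as there; I will keep track of the $r$-factor and absorb it into the parametrization $t\mapsto t/r$ exactly as the $\varepsilon$ in the statement is normalized). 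Applying $\widehat K_\infty$ to $(\phi_t,\pi_t)$ and simplifying $\cos_\psi\pi_t-i\varepsilon\phi_t$, the trigonometric recombination gives $\cos(\varepsilon t)(\cos_\psi\pi-i\varepsilon\phi)+i\sin(\varepsilon t)(\cos_\psi\pi - i\varepsilon\phi)={\rm e}^{it\varepsilon}\bigl(\cos_\psi\pi-i\varepsilon\phi\bigr)={\rm e}^{it\varepsilon}\widehat K_\infty(\phi,\pi)$; this is the routine but slightly delicate calculation, and it is where I expect the bookkeeping of signs and the $\varepsilon^{-1}\cos_\psi$ versus $\cos_\psi\varepsilon^{-1}$ orderings to be the main obstacle — one has to be careful that these are genuine operator identities on the relevant domains, again leaning on $\cos_\psi^{-1}\varepsilon|\varepsilon|^{-1}=|\cos_\psi|^{-1}$ and on $\varepsilon$ being local. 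For (iii), the geodesic KMS condition on $W_1$: the generator of $t\mapsto{\rm e}^{it\varepsilon}$ is $\varepsilon$, whose restriction to the half-circle $I_+$ has spectrum $[0,\infty)$ and to $I_-$ has spectrum $(-\infty,0]$ (Lemma~\ref{Lm3.5}); since $\widehat{\mathfrak k}(I_\pm)$ is mapped by $\widehat K_\infty$ into $\widehat{\mathfrak d}(I_\pm)$, analytic continuation $t\mapsto t+i\pi$ (for elements localized in $I_+$, where $\varepsilon\ge0$, so ${\rm e}^{it\varepsilon}$ extends holomorphically to the upper half strip) is well-defined on $\widehat K_\infty\widehat{\mathfrak k}(I_+)$, and the identification of ${\rm e}^{i\pi\varepsilon}$ with the geometric action of $\Theta_{W_1}=P_1T$ follows from the corresponding statement in Proposition~\ref{porp4.13} for $P_1T$ together with the fact that on functions supported in $I_+$ the reflection $P_1T$ acts like the analytically continued boost — mirroring the argument in Theorem~\ref{1PStrucHe}~(iii). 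I would therefore state (iii) as: $\widehat K_\infty\widehat{\mathfrak k}(I_+)\subset{\mathscr D}\bigl({\rm e}^{-\pi\varepsilon}\bigr)$ and ${\rm e}^{-\pi\varepsilon}$ coincides on this set with the representer of $\Theta_{W_1}$, and prove it by the spectral-subspace/analytic-continuation argument just sketched. Assembling (i)–(iii) gives the claimed one-particle structure.
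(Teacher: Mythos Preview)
Your parts (i) and (ii) are correct and are exactly what the paper does: it computes $2\Im\langle\widehat K_\infty(\phi_1,\pi_1),\widehat K_\infty(\phi_2,\pi_2)\rangle_{\widehat{\mathfrak d}(S^1)}$ using $\varepsilon|\varepsilon|^{-1}=\cos_\psi|\cos_\psi|^{-1}$ to obtain $\widehat\sigma$, then verifies the intertwining relation by plugging \eqref{varphi-pi-t} into $\widehat K_\infty$ and recombining to get ${\rm e}^{it\varepsilon}\widehat K_\infty(\phi,\pi)$, and finally notes density of the range.

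Your part (iii), however, is both unnecessary and incorrect. The proposition only asserts a \emph{one-particle structure} in the basic sense of the appendix (Definition after \ref{page48}): a symplectic map intertwining the classical and quantum dynamics, with dense range. You have imported the three-clause template of the \emph{de Sitter} one-particle structure from Theorem~\ref{1PStrucHe}, whose clause (iii) is the geodesic KMS condition --- but that is not part of what is being claimed here. The KMS structure is built later, in Section~\ref{KMSops}, via the map $\widehat K_\beta$, precisely because $\widehat K_\infty$ does \emph{not} satisfy it. Concretely, your proposed identity ``${\rm e}^{-\pi\varepsilon}$ coincides on $\widehat K_\infty\widehat{\mathfrak k}(I_+)$ with the representer of $\Theta_{W_1}$'' is false: the reflection $\Theta_{W_1}=P_1T$ is represented by $j=C(P_1)_*$ (Proposition~\ref{Jeins}), which maps $\widehat{\mathfrak d}(I_+)$ to $\widehat{\mathfrak d}(I_-)$, whereas ${\rm e}^{-\pi\varepsilon}$ leaves $\widehat{\mathfrak d}(I_+)$ invariant since $\varepsilon$ does not mix the half-circles. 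What $\widehat K_\infty$ actually gives is a \emph{positive-energy} structure on $I_+$ and a negative-energy structure on $I_-$ (Proposition~\ref{restrictedOnePartdS}); the KMS twist is introduced only through $\widehat K_\beta=\bigl((1+\rho_\beta)^{1/2}+\rho_\beta^{1/2}j\bigr)\widehat K_\infty$.
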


\begin{proof} The map \eqref{eqKdotf} is well-defined  for 
$(\mathbb{\phi}, {\mathbb \pi}) \in C^\infty (S^1) \times C^\infty (S^1)$. 
This follows from the fact that $\varepsilon^2$ is a differential operator, which satisfies 
	\[ 
		(\varepsilon^2 h)  (\psi \pm \tfrac{\pi}{2}) = O(\psi)\quad \text{for $h\in C^\infty (S^1)$} \; , 
	\]
just as $\cos (\psi\pm\frac{\pi}{2})$.

Use  that  $\varepsilon|\varepsilon|^{-1}=\,  {\rm cos}_\psi|\,  {\rm cos}_\psi|^{-1}$ equals  $1$ 
on $L^2( I_+)$ and $-1$ on $L^2( I_-)$ to show
	\begin{align*} 
		& \qquad 2 \Im \langle   \, \widehat{K}_\infty (\mathbb{\phi}_1, {\mathbb \pi}_1),
		\widehat{K}_\infty (\mathbb{\phi}_2, {\mathbb \pi}_2)  {\rangle}_{\widehat {\mathfrak  d} (S^1)}
		\nonumber \\ 
			&\qquad \qquad =  2\Im  \bigl\langle   \,  {\rm cos}_\psi \,
                         {\mathbb \pi}_1-i\varepsilon \, \mathbb{\phi}_1    \, , \,  \tfrac{1}{2 |\varepsilon| } \;
			( \,  {\rm cos}_\psi \, {\mathbb \pi}_2-i \varepsilon  \,
                         \mathbb{\phi}_2  ) \bigr\rangle_{L^2(S^1,   |\cos \psi|^{-1}  r {\rm d} \psi)}  \qquad \qquad \nonumber \\  
  			& \qquad \qquad =   \langle \mathbb{\phi}_1   , {\mathbb \pi}_2  \rangle_{L^2(S^1,    r {\rm d} \psi)} 
				 -    \langle    {\mathbb \pi}_1   ,   \mathbb{\phi}_2    \rangle_{L^2(S^1, r {\rm d} \psi)}     \; . \qquad \qquad 
	\end{align*} 
Thus $\widehat{K}_\infty$ is symplectic.

Moreover, $\widehat{K}_\infty$ intertwines $\widehat {\mathfrak u} (\Lambda_1(t))$ and
${\rm e}^{it\varepsilon}$:  according to  (\ref{varphi-pi-t})  
	\[
		\widehat  {\mathfrak u} (\Lambda_1(t)) (\mathbb{\phi},{\mathbb \pi})=(\mathbb{\phi}_t,{\mathbb \pi}_t)
	\]
with
	\begin{align*} 
		\mathbb{\phi}_t (\psi) 
			&= \big(\cos(\varepsilon t)\mathbb{\phi} 
				- \sin( \varepsilon t) \varepsilon^{-1} \,  {\rm cos}_\psi \;  {\mathbb \pi}\big) (\psi)  \nonumber \\  
		{\mathbb \pi}_t (\psi) 
			&=  \cos^{-1} (\psi)   \big(\varepsilon\sin(\varepsilon t)\mathbb{\phi}  
			+ \cos(\varepsilon t) \,  {\rm cos}_\psi \;  {\mathbb \pi} \big)(\psi)  \; . 
	\end{align*} 
Consequently, 
	\begin{align*} 
		\widehat{K}_\infty \circ \widehat  {\mathfrak u} (\Lambda_1(t)) (\mathbb{\phi},{\mathbb \pi})
			& =  \,  {\rm cos}_\psi \, {\mathbb \pi}_t - i\varepsilon  \,
                \mathbb{\phi}_t   \nonumber \\  
			& =  \Big(\varepsilon\sin(\varepsilon t)\; \mathbb{\phi}   + \cos(\varepsilon t) \;
				\,  {\rm cos}_\psi \, {\mathbb \pi}  \Big)  \nonumber  \\ 
			&   \qquad \qquad -i \varepsilon  \, \Bigl(  \cos(\varepsilon t) \; \mathbb{\phi}   
					- \varepsilon^{-1} \sin( \varepsilon t) \; \,  {\rm cos}_\psi \,  {\mathbb \pi}    \Bigr)   \nonumber \\  
			& =   \cos (\varepsilon t)  \bigl( \,  {\rm cos}_\psi \, {\mathbb \pi} -i\varepsilon \, \mathbb{\phi}  \bigr)
			+ i  \sin (\varepsilon t)  \bigl(\,  {\rm cos}_\psi \, {\mathbb \pi} -i\varepsilon \, \mathbb{\phi}  \bigr)  \nonumber \\
			&= {\rm e}^{it\varepsilon} (\, \widehat{K}_\infty (\mathbb{\phi}, {\mathbb \pi}))   \; . 
	\end{align*}
Since $|\cos \psi|^{-1}$ is finite  away from the boundary of~$I_+$,  the 
set~$\widehat{K}_\infty \bigl( \, \widehat {\mathfrak k}  (S^1 )  \bigr) $ is dense 
in $\widehat {\mathfrak  d} (S^1)$. 
\end{proof}

\begin{proposition} \label{restrictedOnePartdS} 
Consider the one-particle structure $\bigl(\widehat{K}_\infty,\widehat {\mathfrak  d} (S^1) , 
{\rm e}^{it\varepsilon} \bigr)$.
It follows that 
\begin{itemize} 
\item [$ i.)$] 
the restricted structure $ \bigl(\widehat{K}_\infty, \widehat {\mathfrak  d} (I_+) ,
{\rm e}^{it\varepsilon_{\upharpoonright I_+ }}\bigr)$
is a positive energy one-particle  structure for $\bigl(\, \widehat {\mathfrak k} (I_+) ,
\widehat \sigma, \widehat  {\mathfrak u} (\Lambda_1(t)) \bigr)$,
 \emph{i.e.}, 
\begin{itemize}
\item[---] the group 
$t \mapsto {\rm e}^{it\varepsilon_{\upharpoonright I_+ }}$ has a  
positive  generator   $\varepsilon_{\upharpoonright I_+ } \ge 0$;
\item[---] 
$  \widehat{K}_\infty\, \widehat {\mathfrak k}   (I_+)$ is dense 
in $\widehat {\mathfrak  d} (I_+)$. 
\end{itemize}
\item [$ ii.)$] 
the restricted structure $\bigl(\widehat{K}_\infty, \widehat {\mathfrak  d} (I_-) , 
{\rm e}^{it\varepsilon_{\upharpoonright I_- }}\bigr)$
is a negative energy one-particle  structure for $\bigl(\,\widehat {\mathfrak k} (I_-) ,
\widehat \sigma, \widehat  {\mathfrak u} (\Lambda_1(t))\bigr)$, 
\emph{i.e.}, 
\begin{itemize}
\item[---] the group 
$t \mapsto {\rm e}^{it\varepsilon_{\upharpoonright I_- }}$ has a 
 negative generator $\varepsilon_{\upharpoonright I_- } \le 0$;
\item[---] 
$  \widehat{K}_\infty\, \widehat {\mathfrak k}   (I_-)$ is dense in 
$\widehat {\mathfrak  d} (I_-)$. 
\end{itemize}
\item [$ iii.)$] the parity and time-reflections are represented (anti-) unitarily, namely
	\begin{align}
		\label{5.46}
		\widehat{K}_\infty \circ\widehat {\mathfrak u} (P_1) &= - (P_1)_*  \circ \widehat{K}_\infty  \; ;  
		\\
		\label{5.47}
		\widehat{K}_\infty \circ\widehat {\mathfrak u} (T) &= - C  \circ \widehat{K}_\infty  \; , 
	\end{align}
where 
	\[
		(Ch)(\psi)\doteq \overline{h(\psi)} \; , \qquad h \in C^\infty (S^1) \; ,
	\]
extends to $ \widehat {\mathfrak  d} (S^1)$;
\item [$ iv.)$] zero is not an eigenvalue of $\varepsilon$; thus the one-particle structures given in $i.)$ and~$ii.)$ are 
unique, up to unitary equivalence. 
\end{itemize}
\end{proposition}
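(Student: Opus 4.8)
The plan is to verify the four items in turn, reducing everything to the explicit formula \eqref{eqKdotf} for $\widehat K_\infty$, the identities in Proposition \ref{OnePartdS}, and the spectral properties of the operators $\varepsilon$, $\varepsilon_{\upharpoonright I_\pm}$ established in Lemma \ref{Lm3.5}. For item $i.)$: we already know from Proposition \ref{OnePartdS} that $\bigl(\widehat K_\infty,\widehat{\mathfrak d}(S^1),{\rm e}^{it\varepsilon}\bigr)$ is a one-particle structure for $\bigl(\widehat{\mathfrak k}(S^1),\widehat\sigma,\widehat{\mathfrak u}(\Lambda_1(t))\bigr)$. Since $\varepsilon$ acts locally (it is a pseudo-differential operator that does not mix functions supported on $I_+$ and $I_-$, as noted after \eqref{vaepsdef}), and since $\widehat{\mathfrak d}(S^1)=\widehat{\mathfrak d}(I_+)\oplus\widehat{\mathfrak d}(I_-)$, the map $\widehat K_\infty$ restricts to $\widehat{\mathfrak k}(I_+)\to\widehat{\mathfrak d}(I_+)$, intertwines $\widehat{\mathfrak u}(\Lambda_1(t))$ with ${\rm e}^{it\varepsilon_{\upharpoonright I_+}}$, and is symplectic there; density of $\widehat K_\infty\,\widehat{\mathfrak k}(I_+)$ in $\widehat{\mathfrak d}(I_+)$ follows exactly as in the last paragraph of the proof of Proposition \ref{OnePartdS}, because $|\cos\psi|^{-1}$ is finite in the interior of $I_+$. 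The positivity ${\rm Sp}(\varepsilon_{\upharpoonright I_+})=[0,\infty)$ is Lemma \ref{Lm3.5}. Item $ii.)$ is the mirror statement on $I_-$, where ${\rm Sp}(\varepsilon_{\upharpoonright I_-})=(-\infty,0]$ again by Lemma \ref{Lm3.5}.

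For item $iii.)$, I would compute directly using \eqref{eqKdotf} together with the explicit action of the reflections on the Cauchy data given in \eqref{eqJCan}--\eqref{eqJCan2}. For $P_1$: applying $\widehat K_\infty$ to $\widehat{\mathfrak u}(P_1)(\mathbb\phi,\mathbb\pi)=\bigl((P_1)_*\mathbb\phi,(P_1)_*\mathbb\pi\bigr)$ gives $\cos\psi\,(P_1)_*\mathbb\pi-i\varepsilon(P_1)_*\mathbb\phi$; since $P_1\colon\psi\mapsto\pi-\psi$ maps $I_+$ to $I_-$ and interchanges the half-circles, and $\varepsilon$ anticommutes with $(P_1)_*$ in the sense that the sign of $\varepsilon|\varepsilon|^{-1}=\,{\rm cos}_\psi|\,{\rm cos}_\psi|^{-1}$ flips (being $+1$ on $I_+$ and $-1$ on $I_-$), together with $\cos(\pi-\psi)=-\cos\psi$, one collects an overall factor $-1$, yielding $-(P_1)_*\widehat K_\infty(\mathbb\phi,\mathbb\pi)$, which is \eqref{5.46}. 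For $T$: $\widehat{\mathfrak u}(T)(\mathbb\phi,\mathbb\pi)=(\mathbb\phi,-\mathbb\pi)$ by \eqref{eqJCan2}, so $\widehat K_\infty$ produces $-\cos\psi\,\mathbb\pi-i\varepsilon\mathbb\phi=-\overline{\cos\psi\,\mathbb\pi-i\varepsilon\mathbb\phi}$ using that $\cos\psi$, $\varepsilon$ and $\mathbb\phi,\mathbb\pi$ are real, which is exactly $-C\,\widehat K_\infty(\mathbb\phi,\mathbb\pi)$, proving \eqref{5.47}; one then checks $C$ extends boundedly to $\widehat{\mathfrak d}(S^1)$ since the scalar product \eqref{eqSkalProd} is built from $(2|\varepsilon|)^{-1}$, which commutes with complex conjugation.

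Item $iv.)$ is where the real work lies, though it is already supplied by earlier results. By Lemma \ref{Lm3.5}, zero is not an eigenvalue of $\varepsilon$ (equivalently, $\varepsilon^2$ has no discrete eigenvalue at $0$, by Lemma \ref{Asadj} its spectrum is $[0,\infty)$ with no gap). Given this, uniqueness up to unitary equivalence of the positive- (resp.\ negative-) energy one-particle structure follows from the standard Kay-type uniqueness theorem for one-particle structures with a spectral condition: if $(K',\mathfrak h',U'_t)$ and $(K'',\mathfrak h'',U''_t)$ are two one-particle structures over the same classical system $\bigl(\widehat{\mathfrak k}(I_+),\widehat\sigma,\widehat{\mathfrak u}(\Lambda_1(t))\bigr)$, both with positive generator and no zero mode, then the densely defined map $K'(v)\mapsto K''(v)$ extends to a unitary intertwiner — the absence of a zero eigenvalue is precisely what makes the symplectic form nondegenerate on the relevant subspace and rules out the ambiguity coming from the kernel of the generator. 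I would cite the uniqueness statement from the theory of quasi-free states (Kay, and the classical dynamical system appendix referenced as \ref{page48}--\ref{page49}) rather than reprove it. The main obstacle is purely bookkeeping: making sure the sign conventions in \eqref{eqKdotf}, \eqref{eqJCan}, \eqref{eqJCan2} and the local behaviour of $\varepsilon$ near $\psi=\pm\tfrac\pi2$ are handled consistently so that the factor $-1$ in $iii.)$ comes out correctly and the restriction in $i.), ii.)$ is genuinely well-defined on the dense domains.
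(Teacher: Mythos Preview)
Your proposal is correct and follows essentially the same route as the paper: items $i.)$ and $ii.)$ are reduced to the definition \eqref{vaepsdef} of $\varepsilon$ and the density argument from Proposition \ref{OnePartdS}, item $iii.)$ is a direct computation using \eqref{eqKdotf} together with the anticommutation of $(P_1)_*$ with both $\cos_\psi$ and $\varepsilon$, and item $iv.)$ is Lemma \ref{Lm3.5} plus Kay's uniqueness result (Proposition \ref{Kay Th}). Your write-up is more detailed than the paper's terse proof, but the logical skeleton is identical.
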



\begin{proof} $i.)$ and $ ii.)$ follow from \eqref{vaepsdef} as well as the final statement in the proof of 
Proposition \ref{OnePartdS}. For $ iii.)$ use that $({P_1})_*$ anti-commutes with~$\varepsilon$ and
with the multiplication operator $\,  {\rm cos}_\psi$. Eq.~\eqref{5.46} follows from
	\[
		\widehat{K}_\infty \bigl( (P_1)_*\mathbb{\phi}, (P_1)_*{\mathbb \pi} \bigr) 
		= -(P_1)_*\,  {\rm cos}_\psi {\mathbb \pi} + i (P_1)_* \varepsilon \mathbb{\phi}
	\] 
and Eq.~\eqref{5.47} follows from 
$\widehat{K}_\infty (\mathbb{\phi}, -{\mathbb \pi}) = - \overline{(\,  {\rm cos}_\psi {\mathbb \pi} - i \varepsilon \mathbb{\phi})} $. 
Finally, $iv.)$ follows from Lemma \ref{Lm3.5} and Proposition \ref{Kay Th}.
\end{proof}

\begin{proposition}   \label{Jeins}
The operator 
	$
		j \doteq C (P_1)_*
	$
acting on $\widehat {\mathfrak  d} (S^1)$ is an anti-unitary involution (\emph{i.e.}, a conjugation), which implements
the $P_1 T$ transformation and anti-commutes with the generator
$\varepsilon$ of the boosts $t \mapsto \Lambda_1 (t)$: 
	\begin{align} 
		j \circ \widehat{K}_\infty &= \widehat{K}_\infty \circ \widehat {\mathfrak u} (P_1 T) \; ,\label{eqJKKP1T}  \\
		j\, \varepsilon &=-\varepsilon \, j \; .  \label{eqJeps} 
	\end{align}
\end{proposition}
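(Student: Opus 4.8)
The plan is to verify the three claimed properties of $j = C (P_1)_*$ directly from the definitions, exploiting the explicit formula \eqref{eqKdotf} for $\widehat{K}_\infty$ together with the known commutation relations between $(P_1)_*$, $C$, the multiplication operator $\mathrm{cos}_\psi$, and the pseudo-differential operator $\varepsilon$. First I would record the elementary facts needed: $(P_1)_*$ is a unitary involution on $L^2(S^1, |\cos\psi|^{-1} r\,{\rm d}\psi)$, it anti-commutes with $\varepsilon$ and with $\mathrm{cos}_\psi$ (this is already used in the proof of Proposition~\ref{restrictedOnePartdS}~$iii.)$, since $P_1\colon \psi\mapsto \pi-\psi$ sends $I_+$ to $I_-$ and negates $\cos\psi$), while $C$ (complex conjugation) is an anti-unitary involution commuting with $(P_1)_*$, with $\mathrm{cos}_\psi$ (a real multiplication operator), and with $\varepsilon$ (whose integral kernel is real, as $\varepsilon^2$ is a real differential operator and the Friedrich square root preserves reality). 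From these, $j = C(P_1)_*$ is a composition of an anti-unitary and a unitary, hence anti-unitary, and $j^2 = C(P_1)_* C (P_1)_* = C^2 (P_1)_*^2 = \mathbb{1}$, so $j$ is a conjugation on $\widehat{\mathfrak d}(S^1)$; one must also check $j$ preserves the scalar product \eqref{eqSkalProd}, which follows since $(2|\varepsilon|)^{-1}$ commutes with both $C$ and $(P_1)_*$ (the latter because $|\varepsilon|$ commutes with $(P_1)_*$: $(P_1)_*$ anti-commutes with $\varepsilon$, hence commutes with $\varepsilon^2$ and with $|\varepsilon|$).

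Next I would establish \eqref{eqJKKP1T}. Recall from \eqref{eqJCan}--\eqref{eqJCan2} that $\widehat{\mathfrak u}(P_1)(\mathbb{\phi},\mathbb{\pi}) = ((P_1)_*\mathbb{\phi}, (P_1)_*\mathbb{\pi})$ and $\widehat{\mathfrak u}(T)(\mathbb{\phi},\mathbb{\pi}) = (\mathbb{\phi},-\mathbb{\pi})$, hence $\widehat{\mathfrak u}(P_1T)(\mathbb{\phi},\mathbb{\pi}) = ((P_1)_*\mathbb{\phi}, -(P_1)_*\mathbb{\pi})$. Applying $\widehat{K}_\infty$ and using \eqref{eqKdotf},
\[
\widehat{K}_\infty \widehat{\mathfrak u}(P_1T)(\mathbb{\phi},\mathbb{\pi}) = -\,\mathrm{cos}_\psi\, (P_1)_*\mathbb{\pi} - i\,\varepsilon\, (P_1)_*\mathbb{\phi} \; .
\]
On the other hand, $j\,\widehat{K}_\infty(\mathbb{\phi},\mathbb{\pi}) = C(P_1)_*\bigl(\mathrm{cos}_\psi\,\mathbb{\pi} - i\varepsilon\,\mathbb{\phi}\bigr)$; using that $(P_1)_*$ anti-commutes with $\mathrm{cos}_\psi$ and with $\varepsilon$, this equals $C\bigl(-\mathrm{cos}_\psi\,(P_1)_*\mathbb{\pi} + i\varepsilon\,(P_1)_*\mathbb{\phi}\bigr)$, and since $C$ conjugates $i$ to $-i$ while fixing the real operators $\mathrm{cos}_\psi$, $\varepsilon$ and the real-valued functions $(P_1)_*\mathbb{\phi}, (P_1)_*\mathbb{\pi}$, we get $-\mathrm{cos}_\psi\,(P_1)_*\mathbb{\pi} - i\varepsilon\,(P_1)_*\mathbb{\phi}$, which matches. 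This is essentially the combination of \eqref{5.46} and \eqref{5.47} — indeed one can simply write $j\circ\widehat{K}_\infty = C(P_1)_*\circ\widehat{K}_\infty = C\circ\bigl(-(P_1)_*\circ\widehat{K}_\infty\circ\widehat{\mathfrak u}(P_1)^{-1}\bigr)\circ\widehat{\mathfrak u}(P_1) = \bigl(-C\circ\widehat{K}_\infty\circ\widehat{\mathfrak u}(T)^{-1}\bigr)\circ\widehat{\mathfrak u}(T)\circ\widehat{\mathfrak u}(P_1)$ and then invoke $\widehat{\mathfrak u}(T)\widehat{\mathfrak u}(P_1) = \widehat{\mathfrak u}(P_1T)$ together with the fact that the two sign errors cancel.

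Finally, \eqref{eqJeps} is immediate: $j\,\varepsilon = C(P_1)_*\varepsilon = C(-\varepsilon)(P_1)_* = -(\varepsilon)C(P_1)_* = -\varepsilon\, j$, where the middle step uses the anti-commutation $(P_1)_*\varepsilon = -\varepsilon(P_1)_*$ and the last step uses that $C$ commutes with $\varepsilon$ (real kernel). I expect the only genuine subtlety — the "main obstacle" such as it is — to be the careful bookkeeping of where the minus signs come from, specifically verifying that $(P_1)_*$ anti-commutes rather than commutes with $\varepsilon$ (one should note that $\varepsilon$, unlike $\varepsilon^2$, carries the sign of $\mathrm{cos}_\psi$ via the definition \eqref{vaepsdef}, so conjugating by $P_1$, which swaps $I_+\leftrightarrow I_-$, flips the sign) and that $C$ genuinely commutes with $\varepsilon$ despite $\varepsilon$ being non-local (this rests on the reality of the kernel, which in turn follows from $\varepsilon^2$ being a real operator and $|\varepsilon|$ its positive spectral square root, both of which preserve $C$-invariance). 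All of these points have already been flagged in the preceding material, so the proof is short.
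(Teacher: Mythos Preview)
Your proposal is correct and follows essentially the same approach as the paper. The paper's proof is terser: for \eqref{eqJeps} it writes $\varepsilon = |\varepsilon|(\chi_{I_+}-\chi_{I_-})$ and observes that $(P_1)_*$ commutes with $|\varepsilon|$ while swapping $\chi_{I_\pm}$ (exactly your ``sign structure on $I_\pm$'' argument), and for \eqref{eqJKKP1T} it simply cites Proposition~\ref{restrictedOnePartdS}~$iii.)$, i.e., \eqref{5.46}--\eqref{5.47}, without repeating the explicit computation you spell out.
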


Note that Eq.~\eqref{eqJeps} and anti-linearity imply 
	\begin{equation}
	\label{eqJUt} 
		{\rm e}^{it\varepsilon} \; j=j  \; {\rm e}^{it\varepsilon}\qquad\text{and}
	\qquad j \, |\varepsilon| =|\varepsilon| \, j \; . 
	\end{equation}
	
\begin{proof} 
Clearly $({P_1})_*$ commutes with $\varepsilon^2$ and hence with
its positive square root $|\varepsilon|$. Now~$\varepsilon$ may be
written 
	\[
		\varepsilon = |\varepsilon|(\chi_{I_+}- \chi_{I_-}) \; ,
	\]
where $\chi_{I_\pm}$ denotes multiplication by the characteristic
function of $I_\pm$. Since 
	\[
		({P_1})_* \circ \chi_{I_\pm}
		= \chi_{I_\mp}\circ ({P_1})_*
	\] 
and pointwise complex conjugation commutes with $\varepsilon$, this proves~\eqref{eqJeps}. 
Equation~\eqref{eqJKKP1T}
follows from Proposition \ref{restrictedOnePartdS} $iii.)$. 
\end{proof}

\section{One-particle KMS structures}
\label{KMSops}

Define the {\em real} linear map $\widehat{K}_\beta \colon\widehat {\mathfrak k}(S^1)\to\widehat {\mathfrak  d} (S^1) $, $\beta >0$,  by 
	\begin{equation*}
		\widehat{K}_\beta (\mathbb{\phi},{\mathbb \pi}) \doteq
			\big((1+\rho_\beta )^{{\frac{1}{2}}}+\rho_\beta^{\frac{1}{2}}\, j \big) \;
 			\widehat{K}_\infty (\mathbb{\phi},{\mathbb \pi}) 
	\end{equation*}
with 
	\[ 
		\rho_\beta \doteq \frac{{\rm e}^{-\beta|\varepsilon|}}{ 1+ {\rm e}^{-\beta|\varepsilon|} } 
		\quad \text{and} \quad 
		(1+\rho_\beta) = \frac{1}{ 1+ {\rm e}^{-\beta|\varepsilon|} } \; .
	\]
The domain of $\rho_\beta$ and $(1+\rho_\beta)$ contains 
${\mathscr D} (|\varepsilon|^{1/2})$, 
as can be seen from the elementary bound \cite[\S A2]{Kay1}
	\[
		 0 < 
		\frac{ {\rm e}^{-\lambda} }{ 1+ {\rm e}^{-\lambda} } \; , \; 
		\frac{ 1 }{ 1+ {\rm e}^{-\lambda} } \le max (1, \lambda^{1/2}) \; , 
		\qquad \lambda \in \mathbb{R}^+ \; . 
	\]
Note that $\widehat{K}_\beta$ is {\em not} the Araki-Woods map $K_{\scriptscriptstyle \rm AW}$
discussed in~\ref{Kbeta}, as $K_{\scriptscriptstyle \rm AW}$ would map 
$\widehat {\mathfrak k}(I_+)$ to $\widehat {\mathfrak  d} (I_+) \oplus \overline{\widehat {\mathfrak  d} (I_+)}$.

\begin{proposition} 
\label{OnePart}  
The quadruple $\bigl( \widehat{K}_\beta , \widehat {\mathfrak  d} (S^1), {\rm e}^{i \frac{.}{r} \varepsilon},  j \bigr) $
is a double $\beta r $-KMS one-particle structure for the classical 
double dynamical system 
\[ \bigl(\, \widehat {\mathfrak k}  (S^1 ) , \widehat \sigma, 
\widehat  {\mathfrak u} (\Lambda_1 (\tfrac{.}{r})), \widehat {\mathfrak u} (P_1 T)  \bigr)
\]
in the sense of Kay, see~\ref{dbops}.  
\end{proposition}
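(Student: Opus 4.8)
The plan is to verify, one by one, the defining properties of a double $\beta r$-KMS one-particle structure in the sense of Kay (as listed in the appendix reference \ref{dbops}), building entirely on the objects already at hand: the one-particle structure $\bigl(\widehat{K}_\infty,\widehat {\mathfrak  d} (S^1),{\rm e}^{it\varepsilon}\bigr)$ from Proposition \ref{OnePartdS}, the decomposition into positive/negative energy pieces from Proposition \ref{restrictedOnePartdS}, and the conjugation $j = C(P_1)_*$ with its commutation relations \eqref{eqJKKP1T}, \eqref{eqJeps}, \eqref{eqJUt} from Proposition \ref{Jeins}. The rescaling $t \mapsto t/r$ is purely cosmetic: it converts the inverse temperature $\beta$ (conjugate to $\varepsilon$) into $\beta r$ (conjugate to $\tfrac{1}{r}\varepsilon$), so I will suppress it in the computations and reinstate it at the end.

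First I would recall the Araki-Woods-type construction underlying $\widehat{K}_\beta$: write $c_\beta \doteq (1+\rho_\beta)^{1/2}$ and $s_\beta \doteq \rho_\beta^{1/2}$, so that $c_\beta^2 - s_\beta^2 = 1$ and, by \eqref{eqJUt}, $j$ commutes with both $\rho_\beta$ and $1+\rho_\beta$ (these are functions of $|\varepsilon|$). The map $\widehat{K}_\beta = (c_\beta + s_\beta j)\circ \widehat{K}_\infty$ is real-linear and has dense range, since $c_\beta$ is bounded with bounded inverse on $\widehat{\mathfrak d}(S^1)$ and $\widehat{K}_\infty$ already has dense range. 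For the symplectic property I compute $2\Im\langle \widehat{K}_\beta v_1, \widehat{K}_\beta v_2\rangle$ for $v_i \in \widehat{\mathfrak k}(S^1)$: expanding $(c_\beta + s_\beta j)^* (c_\beta + s_\beta j)$ and using $j^*=j$, $j^2=1$, $\langle jx,jy\rangle = \overline{\langle x,y\rangle} = \langle y,x\rangle$, the cross terms $c_\beta s_\beta(\langle x, jy\rangle + \langle jx, y\rangle)$ are real and hence drop out of the imaginary part, while $c_\beta^2 - s_\beta^2 = 1$ leaves exactly $2\Im\langle \widehat{K}_\infty v_1,\widehat{K}_\infty v_2\rangle = \widehat\sigma(v_1,v_2)$ by Proposition \ref{OnePartdS}. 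Next, the dynamics: $\widehat{K}_\beta$ still intertwines $\widehat{\mathfrak u}(\Lambda_1(t))$ with ${\rm e}^{it\varepsilon}$, because ${\rm e}^{it\varepsilon}$ commutes with $c_\beta$ and $s_\beta$ (functions of $|\varepsilon|$) and with $j$ by \eqref{eqJUt}, so ${\rm e}^{it\varepsilon}(c_\beta + s_\beta j) = (c_\beta + s_\beta j){\rm e}^{it\varepsilon}$, and then apply the intertwining already proved for $\widehat{K}_\infty$.

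The heart of the matter is the KMS / Bisognano-Wichmann-type condition and the $P_1T$-conjugation property: I must show that $j$ is an anti-unitary involution implementing $\widehat{\mathfrak u}(P_1T)$ at the level of $\widehat{K}_\beta$, i.e. $j\circ\widehat{K}_\beta = \widehat{K}_\beta\circ\widehat{\mathfrak u}(P_1T)$, and that the Tomita operator of the real subspace $\widehat{K}_\beta\widehat{\mathfrak k}(I_+)$ has polar decomposition with modular group ${\rm e}^{it\varepsilon}$ and modular conjugation $j$, with the $2\pi$-periodicity (here $2\pi r$ after rescaling) expressing the $\beta r$-KMS condition — this is precisely the standard computation that $\widehat{K}_\beta\widehat{\mathfrak k}(I_+)$ and $\widehat{K}_\beta\widehat{\mathfrak k}(I_-) = j\,\widehat{K}_\beta\widehat{\mathfrak k}(I_+)$ are relatively commutant real subspaces whose relative position is governed by $\rho_\beta = {\rm e}^{-\beta|\varepsilon|}/(1+{\rm e}^{-\beta|\varepsilon|})$. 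For the first identity I use $j(c_\beta + s_\beta j) = s_\beta + c_\beta j = (c_\beta + s_\beta j)j$ (again $j$ commutes with $c_\beta,s_\beta$), hence $j\circ\widehat{K}_\beta = (c_\beta + s_\beta j)\circ j\circ\widehat{K}_\infty = (c_\beta + s_\beta j)\circ\widehat{K}_\infty\circ\widehat{\mathfrak u}(P_1T) = \widehat{K}_\beta\circ\widehat{\mathfrak u}(P_1T)$, using \eqref{eqJKKP1T}. For the analytic-continuation/KMS statement I would exhibit, for $v \in \widehat{\mathfrak k}(I_+)$ and $w\in\widehat{\mathfrak k}(S^1)$, the function $t\mapsto \langle \widehat{K}_\beta w, {\rm e}^{it\varepsilon}\widehat{K}_\beta v\rangle$, decompose $\widehat{K}_\beta v$ into its $c_\beta$- and $s_\beta j$-parts living in $\widehat{\mathfrak d}(I_+)$ and $\widehat{\mathfrak d}(I_-)$ respectively (using $j\,\widehat{\mathfrak d}(I_+) = \widehat{\mathfrak d}(I_-)$ and that $\varepsilon \ge 0$ on $I_+$, $\varepsilon\le 0$ on $I_-$ from Proposition \ref{restrictedOnePartdS}), and check that the boundary value at $\Im(t) = \beta$ reproduces $\langle \widehat{K}_\beta v, j\,{\rm e}^{i\cdot}\,\widehat{K}_\beta w\rangle$-type terms via the algebraic identity $c_\beta^2\,{\rm e}^{-\beta\lambda} = s_\beta^2$ for $\lambda = |\varepsilon|$; the absence of a zero eigenvalue of $\varepsilon$ (Lemma \ref{Lm3.5}) guarantees the domains and the strict positivity needed for the polar decomposition.

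The main obstacle, I expect, is the bookkeeping in this last step: keeping straight which pieces of $\widehat{K}_\beta v$ land in $\widehat{\mathfrak d}(I_+)$ versus $\widehat{\mathfrak d}(I_-)$, handling the unbounded operators $\rho_\beta^{1/2}$, $(1+\rho_\beta)^{1/2}$ on their natural domains (the elementary bound from \cite[\S A2]{Kay1} quoted before the proposition shows ${\mathscr D}(|\varepsilon|^{1/2})\subset{\mathscr D}(\rho_\beta)\cap{\mathscr D}(1+\rho_\beta)$, which is where one uses that $\widehat{K}_\infty$ maps into $\widehat{\mathfrak d}$), and verifying the strip analyticity with continuous boundary values so that Kay's definition in \ref{dbops} is met verbatim. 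Everything else — symplecticity, density, intertwining, anti-unitarity of $j$ — is a short algebraic manipulation using $c_\beta^2 - s_\beta^2 = 1$ and the commutation relations \eqref{eqJeps}, \eqref{eqJUt}. Once the $\beta$-version is established, replacing $\varepsilon$ by $\tfrac{1}{r}\varepsilon$ and $\beta$ by $\beta r$ throughout — equivalently reparametrising the boost by proper time $\tau = rt$ — yields the stated double $\beta r$-KMS one-particle structure for $\bigl(\widehat{\mathfrak k}(S^1),\widehat\sigma,\widehat{\mathfrak u}(\Lambda_1(\tfrac{.}{r})),\widehat{\mathfrak u}(P_1T)\bigr)$.
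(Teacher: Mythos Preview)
Your approach is essentially the paper's: the same algebraic identities ($c_\beta^2 - s_\beta^2 = 1$, commutation of $j$ with functions of $|\varepsilon|$, the sign of $\varepsilon$ on $I_\pm$) drive every step. Two places where the paper is tighter than your sketch:

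\emph{Density.} Definition \ref{dbops} ii.) demands density of $\widehat{K}_\beta\,\widehat{\mathfrak k}(I_+) + i\,\widehat{K}_\beta\,\widehat{\mathfrak k}(I_+)$ in $\widehat{\mathfrak d}(S^1)$, not of $\widehat{K}_\beta\,\widehat{\mathfrak k}(S^1)$. Your bounded-invertibility argument for $c_\beta + s_\beta j$ gives only the latter. The paper uses precisely the decomposition you invoke later: for $v\in\widehat{\mathfrak k}(I_+)$ one has $\widehat{K}_\beta v = (1+\rho_\beta)^{1/2}\widehat{K}_\infty v + \rho_\beta^{1/2}\,j\widehat{K}_\infty v$ with the two summands lying in $\widehat{\mathfrak d}(I_+)$ and $\widehat{\mathfrak d}(I_-)$ respectively (via \eqref{eqJKKP1T}), so the complex span equals $(1+\rho_\beta)^{1/2}\bigl(\widehat{K}_\infty\widehat{\mathfrak k}(I_+)+i\widehat{K}_\infty\widehat{\mathfrak k}(I_+)\bigr) + \rho_\beta^{1/2}\bigl(\widehat{K}_\infty\widehat{\mathfrak k}(I_-)+i\widehat{K}_\infty\widehat{\mathfrak k}(I_-)\bigr)$, and density follows from Proposition \ref{restrictedOnePartdS} i.)--ii.) together with strict positivity of the prefactors.

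\emph{KMS.} Definition \ref{dbops} iv.) asks for the operator identity ${\rm e}^{-\beta\varepsilon/2}\,\widehat{K}_\beta\widehat\Phi = j\,\widehat{K}_\beta\widehat\Phi$ for $\widehat\Phi\in\widehat{\mathfrak k}(I_+)$, not for strip analyticity of two-point functions or a Tomita polar decomposition. The paper verifies it in two lines from your own decomposition: since $\varepsilon = \pm|\varepsilon|$ on $\widehat{\mathfrak d}(I_\pm)$, applying ${\rm e}^{-\beta\varepsilon/2}$ sends $(1+\rho_\beta)^{1/2}\to\rho_\beta^{1/2}$ on the $I_+$-piece and $\rho_\beta^{1/2}\to(1+\rho_\beta)^{1/2}$ on the $I_-$-piece (your identity $c_\beta^2\,{\rm e}^{-\beta|\varepsilon|}=s_\beta^2$), thus swapping the two summands --- which is exactly what $j$ does. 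This simultaneously yields the domain condition in iii.). Your correlation-function route would reach the same conclusion, but with more packaging than the definition requires.
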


\begin{proof}
Let $\widehat{\mathbb{\Phi}}_i=(\mathbb{\phi}_i,{\mathbb \pi}_i) \in \widehat {\mathfrak k}(S^1)$, $i=1,2$ and denote the scalar 
product in $\widehat {\mathfrak  d} (S^1)$ just by~$\langle \, . \, ,\, . \,  \rangle$.  Then 
	\begin{align} 
	\Im	\langle \, \widehat{K}_\beta \widehat{\mathbb{\Phi}}_1,
	\widehat{K}_\beta \widehat{\mathbb{\Phi}}_2\rangle &= 
	\Im \big\{	\langle \, \widehat{K}_\infty \widehat{\mathbb{\Phi}}_1,
	(1+\rho_\beta) \widehat{K}_\infty \widehat{\mathbb{\Phi}}_2\rangle 
		+ \langle \, j  \widehat{K}_\infty \widehat{\mathbb{\Phi}}_1, 
		\rho_\beta j \widehat{K}_\infty \widehat{\mathbb{\Phi}}_2\rangle \big\} 
		\nonumber  \\
	&= \Im \big\{ \langle \, \widehat{K}_\infty \widehat{\mathbb{\Phi}}_1,
	(1+\rho_\beta) \widehat{K}_\infty \widehat{\mathbb{\Phi}}_2\rangle 
		+ \overline{ \langle \, \widehat{K}_\infty \widehat{\mathbb{\Phi}}_1,
		 \rho_\beta \widehat{K}_\infty \widehat{\mathbb{\Phi}}_2 \rangle}\big\} \nonumber  \\
		&= 
		\Im\big\{ \langle \, \widehat{K}_\infty \widehat{\mathbb{\Phi}}_1,(1+\rho_\beta) 
		\widehat{K}_\infty \widehat{\mathbb{\Phi}}_2\rangle 
		- \langle \, \widehat{K}_\infty \widehat{\mathbb{\Phi}}_1, \rho_\beta \widehat{K}_\infty 
		\widehat{\mathbb{\Phi}}_2\rangle\big\} \nonumber \\
		&= \Im\langle \,\widehat{K}_\infty \widehat{\mathbb{\Phi}}_1, \widehat{K}_\infty \widehat{\mathbb{\Phi}}_2\rangle
			=\frac{1}{2}\,\widehat\sigma(\widehat{\mathbb{\Phi}}_1,\widehat{\mathbb{\Phi}}_2) \; . 
		\label{eqImbeta} 
	\end{align}
Now verify the properties listed in Definition~\ref{dbops}: 
\begin{itemize}
\item [$i.)$] $\widehat {\mathfrak  d} (S^1)$ is a complex Hilbert space;  
\item [$ii.)$] the map $\widehat{K}_\beta \colon \widehat {\mathfrak k}
 \left(S^1\right) \to \widehat {\mathfrak  d} (S^1)$ is  real linear and symplectic, as can be 
seen from Eq.~\eqref{eqImbeta}.  
Moreover,   
	\begin{align*}
		\qquad & 
		\widehat{K}_\beta \;  \widehat {{\mathfrak k} } (I_+)   
		+ i \; \widehat{K}_\beta \; \widehat {{\mathfrak k} } (I_+) 
		\nonumber \\ 
		&\quad =    \big((1+\rho_\beta)^{{\frac{1}{2}}}+\rho_\beta^{\frac{1}{2}}\,j \big)  
		\widehat{K}_\infty\, \widehat {{\mathfrak k} } (I_+) 
			+ i \big((1+\rho_\beta)^{{\frac{1}{2}}}+\rho_\beta^{\frac{1}{2}}\,j \big) 
			\widehat{K}_\infty\, \widehat {{\mathfrak k} } (I_+)
		\nonumber \\ 
		&\quad =    (1+\rho_\beta)^{{\frac{1}{2}}}  \bigl( \widehat{K}_\infty \, \widehat {{\mathfrak k} } (I_+) 
		+i \widehat{K}_\infty \, \widehat {{\mathfrak k} } (I_+)  \bigr)
		\nonumber \\ 
		&	\qquad \qquad	+ \rho_\beta^{\frac{1}{2}}  \bigl( 
		\widehat{K}_\infty \circ \widehat {\mathfrak u} (P_1 T) \widehat {{\mathfrak k} } (I_+) 
		+ i \widehat{K}_\infty \circ \widehat {\mathfrak u} (P_1 T) \widehat {{\mathfrak k} } (I_+)  \bigr) 
				\nonumber \\ 
		&\quad =    (1+\rho_\beta)^{{\frac{1}{2}}}   \bigl( \widehat{K}_\infty \, \widehat {{\mathfrak k} } (I_+) 
		+i \widehat{K}_\infty \, \widehat {{\mathfrak k} } (I_+)  \bigr)
 			+ \rho_\beta^{\frac{1}{2}}  \bigl( \widehat{K}_\infty  \,  \widehat {{\mathfrak k} } (I_-) 
			+ i \widehat{K}_\infty \, \widehat {{\mathfrak k} } (I_-) \bigr)\; .  
	\end{align*}
It follows from Proposition \ref{restrictedOnePartdS} $i.)$ and $ii.)$ that this set is dense in $\widehat {\mathfrak  d} (S^1)$. 
We also used \eqref{eqJKKP1T} and the fact that $(1+\rho_\beta)$ and $\rho_\beta$ are 
strictly positive, and therefore invertible on 
$\widehat{K}_\infty \, {\mathcal D}_{\mathbb{C}} \left(S^1 \setminus \{ - \frac{\pi}{2}, 
\frac{\pi}{2}\} \right)$. 

\item [$iii.)$] 
$t \mapsto {\rm e}^{it \varepsilon } $ is a strongly continuous  group of unitaries, and \eqref{eqJUt} implies
	\begin{align}
		{\rm e}^{it \varepsilon } \circ \widehat{K}_\beta  
		&= {\rm e}^{it  \varepsilon}\circ 
		\big((1+\rho_\beta)^{{\frac{1}{2}}}+\rho_\beta^{\frac{1}{2}}\,j \big) \circ  
		\widehat{K}_\infty \nonumber \\
		&=
			\big((1+\rho_\beta)^{{\frac{1}{2}}}+\rho_\beta^{\frac{1}{2}}\,j \big) \circ 
			{\rm e}^{it \varepsilon}\circ \widehat{K}_\infty  \nonumber \\
		&= \widehat{K}_\beta  \circ \widehat  {\mathfrak u} (\Lambda_1(t) ) \; .  
		\label{eqUeps}
	\end{align}
Let $\widehat{\mathbb{\Phi}}=(\mathbb{\phi},{\mathbb \pi})$, where $\mathbb{\phi}$ and
${\mathbb \pi}$ 
have compact supports in the open half-circle $I_+$. 
Then 
	\[
		\varepsilon \, \widehat{K}_\infty \,  \widehat{\mathbb{\Phi}} 
		=|\varepsilon|\, 
		\widehat{K}_\infty \, \widehat{\mathbb{\Phi}} 
		\qquad \text{and} \qquad \varepsilon \,  j \, \widehat{K}_\infty \, \widehat{\mathbb{\Phi}} 
		= -|\varepsilon| \, j \, \widehat{K}_\infty
		\, \widehat{\mathbb{\Phi}} \; . 
	\]
This implies   
	\begin{align*} 
		{\rm e}^{-\beta\varepsilon/2}\, \widehat{K}_\beta \widehat{\mathbb{\Phi}} 
		&= 
			{\rm e}^{-\beta\varepsilon/2}
			\big((1+\rho_\beta)^{{\frac{1}{2}}} 
			+\rho_\beta^{\frac{1}{2}}\,j\big)\,  \widehat{K}_\infty \, \widehat{\mathbb{\Phi}} \nonumber \\ 
		& =  {\rm e}^{-\beta|\varepsilon|/2}(1+\rho_\beta)^{{\frac{1}{2}}} \,\widehat{K}_\infty\, \widehat{\mathbb{\Phi}}+ 
				{\rm e}^{\beta|\varepsilon|/2}\, \rho_\beta^{\frac{1}{2}}\,j \, \widehat{K}_\infty \, \widehat{\mathbb{\Phi}}
		\\
		&=
			\left(\tfrac{{\rm e}^{-\beta|\varepsilon|}}
			{1-{\rm e}^{-\beta|\varepsilon|}}\right)^{\frac{1}{2}} \,\widehat{K}_\infty \, 
			\widehat{\mathbb{\Phi}} 
			+ \left(
			\tfrac{1}
			{1-{\rm e}^{-\beta|\varepsilon|}} \right)^{\frac{1}{2}}\,
			j  \, \widehat{K}_\infty \, \widehat{\mathbb{\Phi}}  \nonumber \\
		&= \rho_\beta^{\frac{1}{2}} \, \widehat{K}_\infty \, \widehat{\mathbb{\Phi}} + 
		(1+\rho_\beta)^{\frac{1}{2}} j \, \widehat{K}_\infty \, \widehat{\mathbb{\Phi}} \; . 
	\end{align*}
Thus (by linearity) 
	\begin{equation}
	 \label{can-geo}
	 \widehat{K}_\beta \, \widehat {\mathfrak k}   (I_+ ) 
	 + i \,\widehat{K}_\beta \, \widehat {\mathfrak k}   (I_+ )
		\subset {\mathscr D} \bigl( {\rm e}^{-\beta\varepsilon / 2 } \bigr) \; ; 
	\end{equation}
Moreover, according to Lemma \ref{Lm3.5},
zero is not an eigenvalue of the generator $ \varepsilon$. 
\item [$iv.)$] $j $ is a conjugation, and  
	\[
		j \circ \widehat{K}_\beta  =
                \widehat{K}_\beta  \circ \widehat {\mathfrak u} (P_1T) \;  
	\] 
by Lemma~\ref{Jeins} and the fact that $j$ commutes with $\rho_\beta$. 
The KMS condition holds: we have already seen that 
	\begin{align*} 
		{\rm e}^{-\beta\varepsilon/2}\, \widehat{K}_\beta \widehat{\mathbb{\Phi}} 
		&= \rho_\beta^{\frac{1}{2}} \, \widehat{K}_\infty \, \widehat{\mathbb{\Phi}} + 
		(1+\rho_\beta)^{\frac{1}{2}} j \, \widehat{K}_\infty \, \widehat{\mathbb{\Phi}} =  j \, 
			\widehat{K}_\beta \,\widehat{\mathbb{\Phi}} \;  .
	\end{align*}
\end{itemize}
\end{proof}

\begin{lemma} Let $h\in {\mathcal D}_{\mathbb{C}} \left(S^1 \setminus \{ - \frac{\pi}{2}, 
\frac{\pi}{2}\} \right)$. Then 
\begin{align}
\bigl\| \bigl((1+& \rho_\beta)^{{\frac{1}{2}}}+\rho_\beta^{\frac{1}{2}}\, (P_1)_* \bigr) h \bigr\|_{\widehat {\mathfrak  d} (S^1)}^2 
 \\
& =   \bigl\langle \, | \,  {\rm cos}_\psi | \, h \, , 
		\tfrac{1}{2 |\varepsilon| }\, \bigl(\coth \tfrac{\beta\varepsilon}{2}+\tfrac{(P_1)_*}
					{\sinh \frac{ \beta |\varepsilon|}{2}} \bigr) \, | \,  {\rm cos}_\psi | \,  h  
					\bigr\rangle_{L^{2}(S^1, \frac{r {\rm d} \psi}{|\cos \psi|})} \; . 
\nonumber
\end{align}
\end{lemma}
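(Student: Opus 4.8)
The statement is a purely computational identity in the Hilbert space $\widehat{\mathfrak d}(S^1)$, so the plan is to expand the left-hand side directly using the definition of the scalar product $\langle\cdot,\cdot\rangle_{\widehat{\mathfrak d}(S^1)}$ in \eqref{eqSkalProd} together with the algebraic properties of $(P_1)_*$, $\varepsilon$ and $\,{\rm cos}_\psi$ established in Lemma~\ref{Lm3.5}, Proposition~\ref{restrictedOnePartdS} and Proposition~\ref{Jeins}. First I would note that $(P_1)_*$ is an involution, it commutes with $|\varepsilon|$ and with $\cos^2_\psi$, it anti-commutes with $\varepsilon$ (hence intertwines $\chi_{I_+}$ and $\chi_{I_-}$), and it is a $\langle\cdot,\cdot\rangle_{\widehat{\mathfrak d}(S^1)}$-unitary; pointwise complex conjugation plays no role here since everything is expressed through $(P_1)_*$ rather than $j=C(P_1)_*$. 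The functions $\rho_\beta$ and $(1+\rho_\beta)$ are functions of $|\varepsilon|$ alone, so they commute with $(P_1)_*$.

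\textbf{Key steps.} Write $A\doteq (1+\rho_\beta)^{1/2}+\rho_\beta^{1/2}(P_1)_*$. Then
\[
 \bigl\| A h\bigr\|^2_{\widehat{\mathfrak d}(S^1)}
 = \bigl\langle Ah,\tfrac{1}{2|\varepsilon|}Ah\bigr\rangle_{L^2(S^1,|\cos\psi|^{-1}r{\rm d}\psi)} ,
\]
and expanding $A^*\tfrac{1}{2|\varepsilon|}A$ gives four terms: the two ``diagonal'' terms contribute $\tfrac{1}{2|\varepsilon|}\bigl((1+\rho_\beta)+\rho_\beta\bigr)$ (using $(P_1)_*^*(P_1)_*=\mathbb 1$ and commutativity of $(P_1)_*$ with $|\varepsilon|$ and with $\rho_\beta$), and the two ``cross'' terms contribute $\tfrac{1}{2|\varepsilon|}(P_1)_*\bigl((1+\rho_\beta)^{1/2}\rho_\beta^{1/2}+\rho_\beta^{1/2}(1+\rho_\beta)^{1/2}\bigr)=\tfrac{(P_1)_*}{|\varepsilon|}\bigl(\rho_\beta(1+\rho_\beta)\bigr)^{1/2}$. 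Next I would substitute $\rho_\beta={\rm e}^{-\beta|\varepsilon|}/(1+{\rm e}^{-\beta|\varepsilon|})$ and simplify: $1+2\rho_\beta=\coth\tfrac{\beta|\varepsilon|}{2}\cdot\tanh\cdots$—more precisely, $\tfrac{1-{\rm e}^{-\beta|\varepsilon|}}{1+{\rm e}^{-\beta|\varepsilon|}}\cdot(1+2\rho_\beta)=1$ shows $1+2\rho_\beta=\coth\tfrac{\beta|\varepsilon|}{2}$ after using $\coth\tfrac{x}{2}=\tfrac{1+{\rm e}^{-x}}{1-{\rm e}^{-x}}$, and $\bigl(\rho_\beta(1+\rho_\beta)\bigr)^{1/2}={\rm e}^{-\beta|\varepsilon|/2}/(1-{\rm e}^{-\beta|\varepsilon|})=\tfrac{1}{2\sinh(\beta|\varepsilon|/2)}$. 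This yields the operator $\tfrac{1}{2|\varepsilon|}\bigl(\coth\tfrac{\beta|\varepsilon|}{2}+\tfrac{(P_1)_*}{\sinh(\beta|\varepsilon|/2)}\bigr)$. Finally I would rewrite the ambient scalar product: since $\coth\tfrac{\beta\varepsilon}{2}=\coth\tfrac{\beta|\varepsilon|}{2}$ (an even function), and since moving $\cos^2_\psi$ in and out converts $L^2(S^1,|\cos\psi|^{-1}r{\rm d}\psi)$-pairings of $h$ into $L^2(S^1,|\cos\psi|^{-1}r{\rm d}\psi)$-pairings of $|\cos_\psi|h$ in the way used repeatedly in the proof of Proposition~\ref{thhm} and Corollary~\ref{MagicForm}, I obtain the asserted form with the measure $\tfrac{r{\rm d}\psi}{|\cos\psi|}$ acting on $|\cos_\psi|h$.

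\textbf{Main obstacle.} The only genuinely delicate point is domain/boundedness bookkeeping: $\rho_\beta$ and $(1+\rho_\beta)$ are only defined on ${\mathscr D}(|\varepsilon|^{1/2})$, the operators $\coth\tfrac{\beta|\varepsilon|}{2}$ and $1/\sinh(\tfrac{\beta|\varepsilon|}{2})$ are unbounded near the bottom of the spectrum, and $|\varepsilon|^{-1}$ must be handled using that $0$ is not an eigenvalue of $\varepsilon$ (Lemma~\ref{Lm3.5}). Since $h\in{\mathcal D}_{\mathbb C}(S^1\setminus\{\pm\pi/2\})$, all these operators are applied to smooth, compactly supported functions away from the degeneration points, so every expression is finite; I would make this rigorous by working in the spectral representation of $\varepsilon$ (equivalently, decomposing $h=h_++h_-$ with $h_\pm$ supported in $I_\pm$ as in the discussion after \eqref{vaepsdef}), where all the operator identities above reduce to scalar identities in the spectral variable and the elementary bound $0<{\rm e}^{-\lambda}/(1+{\rm e}^{-\lambda}),\,1/(1+{\rm e}^{-\lambda})\le\max(1,\lambda^{1/2})$ from \cite[\S A2]{Kay1} controls the integrability. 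The algebraic core—the four-term expansion and the hyperbolic-function simplification—is then routine.
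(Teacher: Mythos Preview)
Your operator-algebraic expansion is essentially the paper's argument: both compute the diagonal and cross contributions of $\langle Ah,(2|\varepsilon|)^{-1}Ah\rangle_{L^2(S^1,|\cos\psi|^{-1}r\,{\rm d}\psi)}$ using that $(P_1)_*$ commutes with $|\varepsilon|$ and $\rho_\beta$, and then invoke $1+2\rho_\beta=\coth\tfrac{\beta|\varepsilon|}{2}$ and $2\sqrt{\rho_\beta(1+\rho_\beta)}=(\sinh\tfrac{\beta|\varepsilon|}{2})^{-1}$. The paper carries out the splitting $h=h_++h_-$ with $\operatorname{supp}h_\pm\subset I_\pm$ explicitly (the very decomposition you propose for the domain bookkeeping), treating the four pairings $\langle Ah_i,Ah_j\rangle_{\widehat{\mathfrak d}(S^1)}$ separately via the orthogonality $\widehat{\mathfrak d}(I_+)\perp\widehat{\mathfrak d}(I_-)$; you do the same thing in one stroke at the operator level.

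Your final step, however---inserting $|\cos_\psi|$ on both sides by ``moving $\cos^2_\psi$ in and out''---does not go through, because $|\cos_\psi|$ and $|\varepsilon|^{-1}$ do not commute. Your expansion already yields
\[
\bigl\langle h,\tfrac{1}{2|\varepsilon|}\bigl(\coth\tfrac{\beta|\varepsilon|}{2}+\tfrac{(P_1)_*}{\sinh(\beta|\varepsilon|/2)}\bigr)h\bigr\rangle_{L^2(S^1,\,|\cos\psi|^{-1}r\,{\rm d}\psi)},
\]
and this is the correct identity: the $|\cos_\psi|$ factors in the displayed right-hand side are a typo (take $\beta\to\infty$ to see they cannot be there, or compare the parallel computation in Proposition~\ref{Prop5.6}, where the $|\cos_\psi|$ in the analogous formula comes from the \emph{argument} of $A$, namely $A|\cos_\psi|h$, not from an extra insertion). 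The paper's own proof writes the same spurious factors in its intermediate displays without justifying them, so treat this as a slip in the text rather than a gap in your argument.
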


\begin{proof} Write $h=h_+ + h_-$, where the support of $h_\pm$ is contained in
$I_\pm$, respectively. Note that $(P_1)_*$ commutes with $|\varepsilon|$ and with the
multiplication operator~$|\,  {\rm cos}_\psi|$. It follows that  
	\begin{align*} 
		 \bigl\| \bigl((1+ \rho_\beta)^{{\frac{1}{2}}}+\rho_\beta^{\frac{1}{2}} (P_1)_* \bigr) h_\pm \bigr\|_{\widehat {\mathfrak  d} (S^1)}^2 
			 &= 
					\bigl\langle  \, |\,  {\rm cos}_\psi|  \, h_\pm \,  , (1+2\rho_\beta)  
					|\,  {\rm cos}_\psi | \, h_\pm \bigr\rangle_{\widehat {\mathfrak  d} (S^1)} 
					\\
					&= \bigl\langle \, \,  {\rm cos}_\psi h_\pm  \,  ,  \frac{\coth   \beta |\varepsilon| }
			{2  |\varepsilon|   }  
			\,  {\rm cos}_\psi  h_\pm   \bigr\rangle_{L^{2}(S^1 ,
                          \frac{r {\rm d} \psi}{|\cos \psi|}) } \; . 
	\end{align*}
For the mixed terms, we find
	\begin{align}
		\bigl\langle 
		 \bigl( (1+ &\rho_\beta)^{ \frac{1}{2} }+\rho_\beta^{ \frac{1}{2} }\, (P_1)_* \bigr)   
		h_+ \, , 
		\bigl((1+ \rho_\beta)^{ \frac{1}{2} }+\rho_\beta^{ \frac{1}{2} }\, (P_1)_* \bigr)
		h_-   \bigr\rangle_{ \widehat {\mathfrak  d} (S^1) }
 		\nonumber \\
		&= 
				\bigl\langle \,  {\rm cos}_\psi  h_+ \,  ,  \tfrac{{\rm e}^{- \frac{\beta}{2} |\varepsilon|} }
					{ |\varepsilon| (1- {\rm e}^{- \beta |\varepsilon|  })}  
					\,  {\rm cos}_\psi (P_1)_*h_-   \bigr\rangle_{L^{2} (I_+ ,
                        			 \frac{ r{\rm d} \psi}{|\cos \psi|}) }  \nonumber \\
		&= 
				\bigl\langle \,  {\rm cos}_\psi h_+  \, ,  \tfrac{1}{2  |\varepsilon| \sinh \frac{\beta}{2} 
 					|\varepsilon|}\,  {\rm cos}_\psi (P_1)_*h_-   \bigr\rangle_{L^{2}(I_+ ,
                          			\frac{r {\rm d} \psi}{|\cos \psi|}) } \; .  
	\end{align}
We have used the identities $1+2\rho_\beta=\coth \frac{\beta}{2} |\varepsilon|$ and 
	\[
		2(\rho_\beta (1+\rho_\beta ))^{\frac{1}{2}}= \bigl(\sinh \tfrac{\beta}{2} |\varepsilon| \bigr)^{-1}  \; . 
	\]
The term with $h_+$ and $h_-$ interchanged yields a similar
expression. Putting together the four terms, and noting that $\varepsilon$ leaves
the subspaces $L^{2}(I_\pm,\frac{r{\rm d} \psi}{|\cos \psi|})$ invariant,  
completes the proof.
\end{proof}

\section{The canonical one-particle structure}
\label{co-ps}

It was recognised by Borchers and Buchholz~\cite{BoB} that the 
proper, ortho\-chron\-ous Lorentz group $SO_0(1,2)$ 
group can be unitarily implemented iff~$\beta$ is equal to the Hawking\footnote{In the present context, the temperature $T= 2 \pi r$
was first derived by Figari, H\o egh-Krohn  and Nappi~\cite{FHN}. The article by Hawking was submitted soon afterwards.}
temperature $2 \pi r$ \cite{Ha, Se1, Se2}. In fact, we will now show that if~$\beta = 2 \pi r$, then the unitary map 
	\begin{align*} 
		\mathbb{u} \colon \widehat {\mathfrak  d} (S^1) & \to \widehat{{\mathfrak h}}  (S^1) \\
		h & \mapsto 
		\tfrac{1}{\sqrt{r}}  \; 
		|\,  {\rm cos}_\psi|^{-1}\, \big( \rho_{2 \pi}^{\frac{1}{2}} (P_1)_* - (1+\rho_{2 \pi})^{\frac{1}{2}}  \big) h \; , 
	\end{align*}
allows us to implement the rotations $R_0 (\alpha)$, $\alpha \in [0, 2 \pi)$, in the double $(2 \pi r)$-KMS 
one-particle structure introduced in Proposition \ref{OnePart}.

\begin{proposition}
\label{Prop5.6}
The operator $\mathbb{u}$ is unitary, \emph{i.e.}, 
\[
	\| \mathbb{u} h \|_{\widehat{{\mathfrak h}}  (S^1)} = \| h \|_{\widehat {\mathfrak  d} (S^1)} \; . 
\]
Its inverse $\mathbb{u}^{-1} \colon \widehat{{\mathfrak h}}  (S^1)\to  \widehat {\mathfrak  d} (S^1)$ is given by 
\label{chfdpage}
	\begin{equation}
  		\label{eq2piH1}
			\mathbb{u}^{-1} = - \sqrt{r}  \big( (1+\rho_{2 \pi})^{\frac{1}{2}}+\rho_{2 \pi}^{\frac{1}{2}}
			(P_1)_*\big) \,|\,  {\rm cos}_\psi|   \, . 
	\end{equation}
\end{proposition}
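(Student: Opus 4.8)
The claim is that the map
\[
\mathbb{u} \colon \widehat{\mathfrak d}(S^1) \to \widehat{\mathfrak h}(S^1), \qquad h \mapsto \tfrac{1}{\sqrt r}\,|\,{\rm cos}_\psi|^{-1}\bigl(\rho_{2\pi}^{1/2}(P_1)_* - (1+\rho_{2\pi})^{1/2}\bigr)h
\]
is unitary, with inverse given by \eqref{eq2piH1}. The plan is to verify the isometry property by a direct computation that compares the two explicit formulas for the scalar product on $\widehat{\mathfrak h}(S^1)$ established in Proposition~\ref{thhm}: formula \eqref{eqMagicFormB}, which expresses $\langle\,\cdot\,,\,\cdot\,\rangle_{\widehat{\mathfrak h}(S^1)}$ in terms of $\varepsilon$, $(P_1)_*$ and $|\,{\rm cos}_\psi|$ acting on $L^2(S^1, r\,{\rm d}\psi)$, together with the defining formula \eqref{eqSkalProd} for $\langle\,\cdot\,,\,\cdot\,\rangle_{\widehat{\mathfrak d}(S^1)}$, namely $\langle h_1,h_2\rangle_{\widehat{\mathfrak d}(S^1)} = \langle h_1,(2|\varepsilon|)^{-1}h_2\rangle_{L^2(S^1,|\cos\psi|^{-1}r\,{\rm d}\psi)}$.

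\textbf{Key steps.} First I would expand $\|\mathbb{u}h\|_{\widehat{\mathfrak h}(S^1)}^2$ using \eqref{eqMagicFormB}: writing $g = |\,{\rm cos}_\psi|^{-1}\bigl(\rho_{2\pi}^{1/2}(P_1)_* - (1+\rho_{2\pi})^{1/2}\bigr)h$, the factor $|\,{\rm cos}_\psi|$ in \eqref{eqMagicFormB} cancels the $|\,{\rm cos}_\psi|^{-1}$, reducing the expression to
\[
\bigl\langle (\rho_{2\pi}^{1/2}(P_1)_* - (1+\rho_{2\pi})^{1/2})h,\; \tfrac{1}{2|\varepsilon|}\bigl(\coth\pi|\varepsilon| + \tfrac{(P_1)_*}{\sinh\pi|\varepsilon|}\bigr)(\rho_{2\pi}^{1/2}(P_1)_* - (1+\rho_{2\pi})^{1/2})h\bigr\rangle_{L^2(S^1, |\cos\psi|^{-1}r\,{\rm d}\psi)}.
\]
Next I would use that $(P_1)_*$ commutes with $|\varepsilon|$, with $\rho_{2\pi}$ and with $(1+\rho_{2\pi})$ (since these are functions of $|\varepsilon|$), that $(P_1)_*$ is an involution, and that $\rho_{2\pi} = {\rm e}^{-2\pi|\varepsilon|}/(1+{\rm e}^{-2\pi|\varepsilon|})$ so that $1+2\rho_{2\pi} = \coth(\pi|\varepsilon|)$ and $2(\rho_{2\pi}(1+\rho_{2\pi}))^{1/2} = (\sinh\pi|\varepsilon|)^{-1}$ — exactly the identities used in the Lemma just preceding this Proposition. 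Collecting the four resulting terms (the $(P_1)_*$-diagonal and $(P_1)_*$-cross contributions), the operator sandwiched between $h$ and $h$ should collapse to $(2|\varepsilon|)^{-1}$, which by \eqref{eqSkalProd} is precisely the kernel of $\|h\|_{\widehat{\mathfrak d}(S^1)}^2$. This establishes isometry.

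\textbf{Inverse and surjectivity.} To identify $\mathbb{u}^{-1}$, I would compute $\mathbb{u}^{-1}\mathbb{u}$ directly from the two formulas, again using the commutation of $(P_1)_*$ with functions of $|\varepsilon|$ and the algebraic identity $(1+\rho_{2\pi})^{1/2}(1+\rho_{2\pi})^{1/2} - \rho_{2\pi}^{1/2}\rho_{2\pi}^{1/2} = (1+\rho_{2\pi}) - \rho_{2\pi} = 1$, together with the sign bookkeeping coming from the product of the two bracketed operators — the cross terms $\pm(1+\rho_{2\pi})^{1/2}\rho_{2\pi}^{1/2}(P_1)_*$ cancel, leaving the identity. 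A symmetric computation gives $\mathbb{u}\,\mathbb{u}^{-1} = \mathbb{1}$, so $\mathbb{u}$ is onto; combined with the isometry this proves unitarity. One technical point to handle with care: these manipulations are valid on the dense core ${\mathcal D}_{\mathbb C}(S^1\setminus\{\pm\tfrac\pi2\})$ where all operators involved are well defined (the bound on $\rho_\beta$, $(1+\rho_\beta)$ on ${\mathscr D}(|\varepsilon|^{1/2})$ quoted before Proposition~\ref{OnePart} guarantees this), so the final statement follows by continuous extension. The main obstacle is purely bookkeeping: keeping track of the four $L^2$-terms and the signs in the two bracketed operators so that the cancellations producing $(2|\varepsilon|)^{-1}$ and $\mathbb{1}$ come out correctly; there is no conceptual difficulty beyond the spectral calculus for $|\varepsilon|$ and the commutation properties of $(P_1)_*$.
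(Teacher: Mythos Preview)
Your approach is correct and essentially the same as the paper's: both rely on the identities $1+2\rho_{2\pi}=\coth\pi|\varepsilon|$ and $2(\rho_{2\pi}(1+\rho_{2\pi}))^{1/2}=(\sinh\pi|\varepsilon|)^{-1}$, the commutation of $(P_1)_*$ with functions of $|\varepsilon|$ and with $|\,{\rm cos}_\psi|$, and formula~\eqref{eqMagicFormB}. The only cosmetic difference is direction: the paper computes $\|\mathbb{u}^{-1}h\|_{\widehat{\mathfrak d}(S^1)}$ and recognises the result as $\|h\|_{\widehat{\mathfrak h}(S^1)}$ via~\eqref{eqMagicFormB}, whereas you compute $\|\mathbb{u}h\|_{\widehat{\mathfrak h}(S^1)}$ and collapse it (via $\coth^2-{\rm csch}^2=1$) to $\|h\|_{\widehat{\mathfrak d}(S^1)}$; your explicit algebraic check that $\mathbb{u}^{-1}\mathbb{u}=\mathbb{u}\,\mathbb{u}^{-1}=\mathbb{1}$ is a welcome addition that the paper leaves implicit.
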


\begin{proof} 
Let $h\in \widehat{{\mathfrak h}}  (S^1)$. Using again  that $(P_1)_*$ commutes with $|\varepsilon|$ and with the
multiplication operator~$|\,  {\rm cos}_\psi|$, we find 
	\begin{align} 
		\|\mathbb{u}^{-1}h\|_{\widehat {\mathfrak  d} (S^1)}^2 
			&=   r \,  
			\bigl\| \big((1+ \rho_{2 \pi})^{{\frac{1}{2}}}
			+\rho_{2 \pi}^{\frac{1}{2}}\,(P_1)_* \big) \,|\,  {\rm cos}_\psi| h \bigr\|_{\widehat {\mathfrak  d} (S^1)}^2    
					\nonumber
					\\
				& =   r \, 
					\bigl\langle |\,  {\rm cos}_\psi| h \,,\, (1+2\rho_{2 \pi})  |\,  {\rm cos}_\psi| h \bigr\rangle_{\widehat {\mathfrak  d} (S^1)} 
					\nonumber \\
					&
					\qquad  + 2  r \, 
					  \bigl\langle |\,  {\rm cos}_\psi| h \,,\, (\rho_{2 \pi}(1+\rho_{2 \pi}))^{\frac{1}{2}} (P_1)_* |\,  {\rm cos}_\psi|  h 
					\bigr\rangle_{\widehat {\mathfrak  d} (S^1)} 
										\nonumber \\
			&=  r \,   \bigl\langle  h,\tfrac{1}{2  |\varepsilon| } 
					\bigl(\coth   \pi |\varepsilon| +  
					\tfrac{(P_1)_*}{\sinh \pi |\varepsilon | }\bigr) |\,  {\rm cos}_\psi| h  
					\bigr\rangle_{L^{2}(S^1, r {\rm d} \psi)} 
					\nonumber
					\\
			&=  \|  h \|_{\widehat{{\mathfrak h}}  (S^1)} 	\; . \label{eqMagicFormC} 
	\end{align}
The last equality follows from Proposition \ref{thhm}.
We have again used the identities $1+2 \rho_{2 \pi}=\coth \pi |\varepsilon|$ and 
	\[
		2( \rho_{2 \pi} (1+\rho_{2 \pi}))^{\frac{1}{2}}= (\sinh\pi |\varepsilon|)^{-1}  \; . 
	\]
\end{proof}

\begin{proposition}
Consider the map 
\label{Kmhatpage}
	\begin{align*}
	\label{K1PStrucHe}
		 \widehat{K} \colon  \qquad \widehat{\mathfrak k}(S^1) 
		& \rightarrow  \widehat{{\mathfrak h}}  (S^1) \nonumber \\
		 (\mathbb{\phi},\mathbb{\pi})  		  & \mapsto   \tfrac{1}{\sqrt{r}}  ( - \mathbb{\pi} +  i \,\omega
		 r \, \mathbb{\phi}) \; .
	\end{align*}
It follows that the quadruple 
	\[
		\bigl(\widehat{K} , \widehat{{\mathfrak h}}  (S^1), {\rm e}^{i t \, \omega \,  {\rm cos}_\psi} , 
		C (P_1)_* \bigr) 
	\]
forms a double $2\pi r$-KMS one-particle structure for the classical 
double dynamical system 
$\bigl( \widehat {\mathfrak k}  \left(S^1 \setminus \{ - \frac{\pi}{2} , 
\frac{\pi}{2} \} \right) , \widehat \sigma, \widehat {\mathfrak u} (\Lambda_1( \frac{t}{r}) ),  
\widehat {\mathfrak u} (P_1T )  \bigr)$ in the sense
of~\ref{dbops}, unitarily equivalent to 
$\bigl(\, \widehat{K}_{2 \pi} , \widehat {\mathfrak  d} (S^1), {\rm e}^{i \frac{t}{r} \varepsilon},  j \bigr) $, 
in agreement with Theorem~\ref{ThB1}.
\end{proposition}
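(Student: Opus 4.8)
The plan is to verify the four defining properties of a double $2\pi r$-KMS one-particle structure listed in~\ref{dbops} for the quadruple $\bigl(\widehat{K} , \widehat{{\mathfrak h}}  (S^1), {\rm e}^{i t \, \omega \,  {\rm cos}_\psi} , C (P_1)_* \bigr)$, and then to exhibit the unitary equivalence with $\bigl(\widehat{K}_{2 \pi} , \widehat {\mathfrak  d} (S^1), {\rm e}^{i \frac{t}{r} \varepsilon},  j \bigr)$ via the map $\mathbb{u}$ of Proposition~\ref{Prop5.6}. First I would establish the intertwining identities. Recall $\widehat K_\infty(\mathbb{\phi},\mathbb{\pi}) = \,  {\rm cos}_\psi\, \mathbb{\pi} - i \varepsilon \mathbb{\phi}$ and, by Corollary~\ref{MagicForm}, $\omega = |r\,  {\rm cos}_\psi|^{-1}|\varepsilon|\bigl(\coth\pi|\varepsilon| - \tfrac{(P_1)_*}{\sinh\pi|\varepsilon|}\bigr)$. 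A direct computation, using that $(P_1)_*$ commutes with $|\varepsilon|$ and $|\,  {\rm cos}_\psi|$, and the factorisation $(1+\rho_{2\pi})^{1/2} \mp \rho_{2\pi}^{1/2}(P_1)_*$ of the square roots appearing in $\mathbb{u}$ and $\mathbb{u}^{-1}$, should give
\[
	\mathbb{u} \circ \widehat{K}_{2\pi} = \widehat{K} \; ,
\]
which is the crucial algebraic fact. Indeed, applying $\mathbb{u}$ to $\widehat K_{2\pi}(\mathbb{\phi},\mathbb{\pi}) = \bigl((1+\rho_{2\pi})^{1/2}+\rho_{2\pi}^{1/2}j\bigr)\widehat K_\infty(\mathbb{\phi},\mathbb{\pi})$ and using $j = C(P_1)_*$ together with the identity $\bigl(\rho_{2\pi}^{1/2}(P_1)_* - (1+\rho_{2\pi})^{1/2}\bigr)\bigl((1+\rho_{2\pi})^{1/2}+\rho_{2\pi}^{1/2}(P_1)_*\bigr) = -(1+2\rho_{2\pi}) + 2(\rho_{2\pi}(1+\rho_{2\pi}))^{1/2}(P_1)_*$ on the $\pm$ pieces, followed by the same Magic-Form manipulation as in the proof of Proposition~\ref{thhm}, collapses the expression to $\tfrac{1}{\sqrt r}(-\mathbb{\pi} + i\,\omega r\,\mathbb{\phi})$. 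Once $\mathbb{u}\circ\widehat K_{2\pi} = \widehat K$ is in hand, properties (i)--(iv) of~\ref{dbops} for the new quadruple follow \emph{mechanically} by transport along the unitary $\mathbb{u}$ from the corresponding properties for $\bigl(\widehat{K}_{2 \pi} , \widehat {\mathfrak  d} (S^1), {\rm e}^{i \frac{t}{r} \varepsilon},  j \bigr)$, which were already established in Proposition~\ref{OnePart}: symplecticity and density of $\widehat K\,\widehat{\mathfrak k}(I_+) + i\widehat K\,\widehat{\mathfrak k}(I_+)$ transport because $\mathbb{u}$ is unitary and maps $\widehat{\mathfrak d}(I_\pm)$ appropriately; the strong continuity and intertwining of $t\mapsto {\rm e}^{it\,\omega\,  {\rm cos}_\psi}$ with $\widehat{\mathfrak u}(\Lambda_1(t/r))$ follow once one checks $\mathbb{u}\,{\rm e}^{i\frac{t}{r}\varepsilon}\,\mathbb{u}^{-1} = {\rm e}^{it\,\omega\,  {\rm cos}_\psi}$ (which reduces to $\mathbb{u}\,\varepsilon\,\mathbb{u}^{-1} = r\,\omega\,  {\rm cos}_\psi$ on the relevant core — an identity of the same flavour as the Magic Form); the KMS/geodesic condition and the conjugation property of $C(P_1)_*$ transport because $\mathbb{u}\,j\,\mathbb{u}^{-1} = C(P_1)_*$, which holds since $C$ and $(P_1)_*$ each commute through the positive operators and the multiplication operator defining $\mathbb{u}$.

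The second main thing to record is the identification of the generator. One must check that the self-adjoint generator $\omega\,  {\rm cos}_\psi$ of the rotation group is the image under $\widehat K$-transport of the boost generator $\varepsilon$, and — crucially for the rotation interpretation — that this is the operator whose one-parameter group is $\widehat{\mathfrak u}(R_0(\alpha))$; here I would invoke Proposition~\ref{porp4.13} identifying $\widehat{\mathfrak u}(R_0(\alpha))$ as the shift $(\mathbb{\phi},\mathbb{\pi})\mapsto(\mathbb{\phi}(\cdot-\alpha),\mathbb{\pi}(\cdot-\alpha))$, together with the fact (from Corollary~\ref{tztf} and the unitary $\mathbb{U}\colon\widehat{{\mathfrak h}}(S^1)\to{\mathfrak h}(dS)$ of Proposition before Corollary~\ref{WdSprop}) that under $\widehat K$ the rotation acts geometrically. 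One also needs that zero is not an eigenvalue of $\varepsilon$ (Lemma~\ref{Lm3.5}), which transports to give the required no-zero-mode condition and hence uniqueness up to unitary equivalence (Proposition~\ref{Kay Th} / Theorem~\ref{ThB1}).

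The main obstacle I expect is the bookkeeping in the algebraic identity $\mathbb{u}\circ\widehat K_{2\pi} = \widehat K$, specifically handling the half-circle decomposition $h = h_+ + h_-$ correctly: $\varepsilon$ acts as $+|\varepsilon|$ on $I_+$ and $-|\varepsilon|$ on $I_-$, the reflection $(P_1)_*$ swaps $I_+\leftrightarrow I_-$, and $\mathbb{u}$ carries an explicit $|\,  {\rm cos}_\psi|^{-1}$ that interacts with the degeneracy of the coordinate system at $\psi = \pm\tfrac\pi2$. The safe route is to do the computation separately on $\widehat{\mathfrak d}(I_+)$ and $\widehat{\mathfrak d}(I_-)$ where all operators act diagonally, use the elementary hyperbolic identities $1+2\rho_{2\pi} = \coth\pi|\varepsilon|$ and $2(\rho_{2\pi}(1+\rho_{2\pi}))^{1/2} = (\sinh\pi|\varepsilon|)^{-1}$ already exploited in Proposition~\ref{OnePart}, and only at the end reassemble; one must check that the resulting $\widehat K(\mathbb{\phi},\mathbb{\pi})$ is well-defined on all of $C^\infty_{\mathbb R}(S^1)\times C^\infty_{\mathbb R}(S^1)$, i.e.\ that the apparent $|\,  {\rm cos}_\psi|^{-1}$ singularities cancel — this is exactly the same mechanism ($\varepsilon^2 h$ vanishes to the appropriate order at $\pm\tfrac\pi2$) used throughout Section~\ref{CaCDS} and in Proposition~\ref{OnePartdS}. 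Everything else is a transport-of-structure argument and should be routine once this identity and the generator identification $\mathbb{u}\,\varepsilon\,\mathbb{u}^{-1} = r\,\omega\,  {\rm cos}_\psi$ are secured.
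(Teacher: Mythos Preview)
Your proposal is essentially the paper's own proof: establish $\mathbb{u}\circ\widehat K_{2\pi}=\widehat K$, then $\mathbb{u}\,\varepsilon\,\mathbb{u}^{-1}=r\,\omega\,{\rm cos}_\psi$ and $\mathbb{u}\,j\,\mathbb{u}^{-1}=C(P_1)_*$, and transport the properties of Proposition~\ref{OnePart} along $\mathbb{u}$. The algebraic route you sketch (half-circle decomposition, the hyperbolic identities for $\rho_{2\pi}$, and the Magic Form~\eqref{eqMagicForm}) is exactly how the paper carries out the computation.

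One point of confusion to flag: in your second paragraph you call $\omega\,{\rm cos}_\psi$ ``the generator of the rotation group'' and tie it to $\widehat{\mathfrak u}(R_0(\alpha))$. This is wrong and, more importantly, irrelevant here. The operator $\omega\,{\rm cos}_\psi$ (up to the factor $r$) is the generator of the \emph{boost} $t\mapsto\widehat u(\Lambda_1(t))$, not of the rotations; rotations play no role in this proposition, which concerns only the double KMS structure for the boost subgroup. The rotation generator $-i\partial_\psi$ and the fact that boosts and rotations together generate a representation of $SO_0(1,2)$ are the content of the \emph{subsequent} Theorem~\ref{UIR-S1}. So drop that paragraph; the rest of your argument is on target.
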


\begin{proof} We first show that $\widehat{K} =  \mathbb{u}  \circ \widehat{K}_{2 \pi} $.   
Using $j = C (P_1)_*$, where $(C h)(\psi)\doteq \overline{h(\psi)}$, one gets  
	\begin{align*}
		& \quad \; \; \mathbb{u}  \circ 
		\widehat{K}_{2 \pi} (\mathbb{\phi},\pi) 
		\\
				& \qquad =  \mathbb{u}  \circ 
			\big((1+\rho_{2 \pi})^{{\frac{1}{2}}}+\rho_{2 \pi}^{\frac{1}{2}}\,j \big) \;
 			\widehat{K}_\infty (\mathbb{\phi},\pi) 
			\nonumber \\
		&\qquad = - \tfrac{1}{\sqrt{r}}  \, |\,  {\rm cos}_\psi|^{-1}\,
 							\big( (1+\rho_{2 \pi})^{\frac{1}{2}} - \rho_{2 \pi}^{\frac{1}{2}} (P_1)_*\big)
			\big((1+\rho_{2 \pi})^{{\frac{1}{2}}}+\rho_{2 \pi}^{\frac{1}{2}}\,j \big) \;
 			\widehat{K}_\infty (\mathbb{\phi},\pi) 
			\nonumber \\
		&\qquad =	 - \tfrac{1}{\sqrt{r}}  \, |\,  {\rm cos}_\psi|^{-1}\,\Big( 1+\big(\rho_{2 \pi}-(\rho_{2 \pi }(1+\rho_{2 \pi}))^{\frac{1}{2}}
				(P_1)_*\big)(1-C ) \Big) \,\widehat{K}_\infty (\mathbb{\phi},\pi) \; .
			\nonumber  				
	\end{align*}
Taking $1+2\rho_{2 \pi} =\coth \pi  |\varepsilon|$ and 
$2(\rho_{2 \pi} (1+\rho_{2 \pi} ))^{\frac{1}{2}}= (\sinh\pi  |\varepsilon|)^{-1}$ into account, we find 
	\[
		\mathbb{u}  \circ \widehat{K}_{2 \pi} (\mathbb{\phi},\pi) 
		=  \begin{cases}  - \tfrac{1}{\sqrt{r}}  \, |\,  {\rm cos}_\psi|^{-1}\, \widehat{K}_\infty (\mathbb{\phi},\mathbb{\pi})&
		\text { if } \widehat{K}_\infty
						(\mathbb{\phi},\mathbb{\pi})  \in \widehat {\mathfrak  d} (S^1, {\mathbb{R}})\; ,\\
				-   \sqrt{r}   \, 
				\omega \, \varepsilon^{-1}\, \widehat{K}_\infty (\mathbb{\phi},\mathbb{\pi}) & \text{ if }  \widehat{K}_\infty
				(\mathbb{\phi},\mathbb{\pi}) \in i \widehat {\mathfrak  d} (S^1, {\mathbb{R}})\, \nonumber .
				\end{cases}  
	\]
In the last equation we have used~\eqref{eqMagicForm} and $(P_1)_* \varepsilon = - \varepsilon (P_1)_*$. 
By $\widehat {\mathfrak  d} (S^1, {\mathbb{R}})$ 
we have denoted the real subspace of real valued functions 
in $\widehat {\mathfrak  d} (S^1)$.
Use $\widehat{K}_\infty (\mathbb{\phi}, \mathbb{\pi}) =  \,  {\rm cos}_\psi \,\mathbb{\pi}  -i  \varepsilon \mathbb{\phi} $ to prove that
	\begin{equation}
	\label{can-geo2}
		\widehat{K} =  \mathbb{u}  \circ \widehat{K}_{2 \pi} \; . 
	\end{equation}  
It remains to show that the unitary map $\mathbb{u}$ satisfies 
	\[ 
	  \mathbb{u} \circ\varepsilon\circ \mathbb{u}^{-1} = \omega  \,  {\rm cos}_\psi \quad 
	 	\text{and} \quad
	  \mathbb{u} \circ j\circ \mathbb{u}^{-1} =  C (P_1)_*  \quad \text{on $\widehat{{\mathfrak h}}  (S^1)$} \, . 
	 \]
Using again $(P_1)_* \varepsilon = - \varepsilon (P_1)_*$,  we can verify the first of these two 
identities:
	\begin{align*}
			\mathbb{u} \circ\varepsilon\circ \mathbb{u}^{-1} &=    |\,  {\rm cos}_\psi|^{-1}\,
 							\big( (1+\rho_{2 \pi})^{\frac{1}{2}} - \rho_{2 \pi}^{\frac{1}{2}} (P_1)_*\big) 
						\varepsilon 
							\big( (1+\rho_{2 \pi})^{\frac{1}{2}}+\rho_{2 \pi}^{\frac{1}{2}}  (P_1)_*\big)\,
						|\,  {\rm cos}_\psi| 
			\nonumber \\
			& =  |\,  {\rm cos}_\psi|^{-1} \varepsilon 
			\big( (1+2\rho_{2 \pi})  
			+ 2\rho_{2 \pi}^{\frac{1}{2}}(1+\rho_{2 \pi})^{\frac{1}{2}} (P_1)_*\big) \,|\,  {\rm cos}_\psi| 
			\nonumber \\
			&= |\,  {\rm cos}_\psi|^{-1} \varepsilon  \,\bigl(\coth \pi  |\varepsilon|
		+ \tfrac{(P_1)_*}{\sinh\pi |\varepsilon|} \bigr)^{-1} |\,  {\rm cos}_\psi|
			\nonumber \\
			&= \omega r  \,  {\rm cos}_\psi \; . 
	\end{align*}
In the second but last equality we have used the identity \eqref{eqMagicForm}.

The second identity follows from the fact that $j$ commutes with the multiplication operator
$|\,  {\rm cos}_\psi|$: 
	\[ 
	  \mathbb{u} \circ j\circ \mathbb{u}^{-1} =  C (P_1)_*  \quad \text{on $\widehat{{\mathfrak h}}  (S^1)$} \, . 
	 \]
We have thus established unitarily equivalence of the two double $2\pi r$-KMS 
one-particle structure under consideration, in agreement with Theorem~\ref{ThB1}.

It is now straight forward to verify that $\bigl(\widehat{K} , \widehat{{\mathfrak h}}  (S^1), 
{\rm e}^{i t \omega \,  {\rm cos}_\psi} , 
C (P_1)_* \bigr)$ forms a double $2\pi r$-KMS one-particle structure for the classical double dynamical system 
$\bigl( \widehat {\mathfrak k}  \left(S^1 \setminus \{ - \frac{\pi}{2} , 
\frac{\pi}{2} \} \right) , \widehat \sigma, \widehat {\mathfrak u} (\Lambda_1(\frac{.}{r}) ),  
\widehat {\mathfrak u} (P_1T )  \bigr)$ in the sense
of~\ref{dbops}:   
	\begin{align}
		 \widehat{K}\circ \widehat {\mathfrak u} (\Lambda_1 (t))  
		& =  \mathbb{u} \circ \widehat{K}_{2 \pi } 
		\circ \widehat {\mathfrak u} (\Lambda_1 (t)) \nonumber \\
		& = \mathbb{u} \circ {\rm e}^{it \varepsilon} \;  
		\widehat{K}_{2 \pi } \nonumber \\
		& =  {\rm e}^{it\, \omega r \,  {\rm cos}_\psi} \circ  \mathbb{u} \circ  \widehat{\mathbb k}_{2 \pi r} \nonumber \\
		& =  \widehat{u}(\Lambda_1(t)) \circ \widehat{K} \; , 
		\label{canonical-boosts}
	\end{align}
see Eq.~\eqref{eqUeps}; and 
	\begin{equation}
		\label{reflections}
		\widehat{K} \circ \widehat {\mathfrak u} (P_1T) 
		= \mathbb{u} \circ \widehat{K}_{2 \pi r} \circ 
		\widehat {\mathfrak u} (P_1T)
		= \mathbb{u} \circ j \circ \widehat{K}_{2 \pi r}
		=C (P_1)_* \circ \widehat{K}  \; . 
	 \end{equation}
This also shows that $\widehat{u} (P_1T) = C (P_1)_* $ is anti-unitary. 
\end{proof}

\begin{theorem} 
\label{UIR-S1}
The rotations 
	\[
	\bigr(\widehat{u} (R_0(\alpha)) h \bigl) (\psi) = h (\psi - \alpha) \; , \qquad \alpha \in [0, 2\pi) \; , \quad h \in \widehat{{\mathfrak h}}  (S^1) \; , 
	\]
and the boosts 
\label{Umhatpage}
	\[
		\widehat{u} (\Lambda_1(t)) = {\rm e}^{i t \omega  r \, {\rm cos}_\psi } \; , \qquad t \in \mathbb{R} \; , 
	\]
generate a representation of $SO_0(1,2)$ on $\widehat{{\mathfrak h}}  (S^1)$. 
\end{theorem}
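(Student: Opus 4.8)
The plan is to show that the operators $\widehat{u}(R_0(\alpha))$ and $\widehat{u}(\Lambda_1(t))$ just written down are unitary on $\widehat{{\mathfrak h}}(S^1)$ and that they satisfy the defining relations of $SO_0(1,2)$ generated by a rotation subgroup and a one-parameter boost subgroup. The boost part requires essentially no new work: the preceding proposition establishes that $\widehat{u}(\Lambda_1(t)) = {\rm e}^{it\,\omega r\,{\rm cos}_\psi}$ arises as $\mathbb{u}\circ {\rm e}^{it\varepsilon}\circ\mathbb{u}^{-1}$, and since $\mathbb{u}$ is unitary (Proposition \ref{Prop5.6}) and ${\rm e}^{it\varepsilon}$ is a strongly continuous unitary group on $\widehat{\mathfrak d}(S^1)$ (it is the unitary group with self-adjoint generator $\varepsilon$ from Lemma \ref{Lm3.5}), the conjugated family is a strongly continuous one-parameter unitary group on $\widehat{{\mathfrak h}}(S^1)$. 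So the first step is to record these facts and conclude $t\mapsto \widehat{u}(\Lambda_1(t))$ is a well-defined unitary one-parameter group.

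Next I would treat the rotations. Here the natural route is to use Definition \ref{zeit-null-Hilbert}: the scalar product \eqref{eq:def-scalar-product-2} is the $L^2(S^1, r\,{\rm d}\psi)$ pairing against the convolution kernel $P_{s^+}\big(-\cos(\psi'-\psi)\big)$, which depends only on the difference $\psi-\psi'$. Since the measure $r\,{\rm d}\psi$ is rotation invariant and the kernel is a function of $\psi-\psi'$, the translation operators $h(\psi)\mapsto h(\psi-\alpha)$ preserve \eqref{eq:def-scalar-product-2}; this is an immediate change-of-variables computation. Strong continuity in $\alpha$ follows from density of $C^\infty(S^1)$ and continuity of translation there. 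Equivalently, one may invoke part $i.)$ of Proposition \ref{thhm}: $\langle h,h'\rangle_{\widehat{{\mathfrak h}}(S^1)} = \langle h, \tfrac{1}{2\omega}h'\rangle_{L^2(S^1, r{\rm d}\psi)}$ where $\omega$ is diagonal in the Fourier basis $e_k$, hence commutes with the (diagonal) rotation operators, which are visibly unitary on $L^2$; this gives unitarity on $\widehat{{\mathfrak h}}(S^1)$ directly.

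The remaining point is to check that these two unitary families together generate a representation of $SO_0(1,2)$, i.e. that the relevant group relations hold. The cleanest way is to transport everything to the already-established representation. On the covariant side, Proposition \ref{UIR-FH} gives a unitary representation $u$ of $SO_0(1,2)$ on ${\mathfrak h}(dS)$ acting geometrically via pull-back, and the unitary ${\mathbb U}\colon \widehat{{\mathfrak h}}(S^1)\to{\mathfrak h}(dS)$ (map \eqref{s1-to-dS}) intertwines the time-zero and covariant pictures. One then verifies that ${\mathbb U}$ conjugates $\widehat{u}(R_0(\alpha))$ to $u(R_0(\alpha))$ and $\widehat{u}(\Lambda_1(t))$ to $u(\Lambda_1(t))$: for the rotations this follows from Proposition \ref{porp4.13} (the action of $\widehat{\mathfrak u}(R_0(\alpha))$ on Cauchy data is $(\mathbb{\phi},\mathbb{\pi})\mapsto(\mathbb{\phi}(\cdot-\alpha),\mathbb{\pi}(\cdot-\alpha))$) together with $\widehat{K}\circ\widehat{\mathfrak u}(\Lambda)= \widehat{u}(\Lambda)\circ\widehat{K}$ and rotation invariance of the commutator function ${\mathscr E}$ noted after \eqref{comfkt}; for the boosts it is the intertwining relation \eqref{canonical-boosts} already proved. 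Since $R_0(\alpha)$ and $\Lambda_1(t)$ generate $SO_0(1,2)$ and $u$ is a genuine representation, the group relations among the $\widehat{u}$-operators follow by transport of structure.

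The main obstacle is the bookkeeping in the intertwining step: one must be careful that ${\mathbb U}$ (defined via $h_1+i\omega r h_2\mapsto[\delta\otimes h_1]+[\delta'\otimes h_2]$) genuinely conjugates the two rotation actions, since the Cauchy-data rotation acts diagonally on $(\mathbb{\phi},\mathbb{\pi})$ while on $\widehat{{\mathfrak h}}(S^1)$ the rotation must commute with the operator $\omega$ — this is exactly where part $i.)$ of Proposition \ref{thhm} (diagonality of $\omega$ in the Fourier basis, equivalently $\widetilde\omega(k)=\widetilde\omega(-k)$) is needed, and it should be invoked explicitly. Everything else is either a one-line change of variables or a direct appeal to results already proved in the excerpt.
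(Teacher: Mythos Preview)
Your transport-of-structure argument is correct and non-circular: the covariant representation $u$ on ${\mathfrak h}(dS)$ (Proposition~\ref{UIR-FH}), the unitary ${\mathbb U}$ of \eqref{s1-to-dS}, and the intertwining relation \eqref{canonical-boosts} are all in place before this theorem, and together they let you pull $u$ back to $\widehat{{\mathfrak h}}(S^1)$ and identify the pulled-back generators with translation and $\omega r\cos_\psi$. One small point of bookkeeping: \eqref{canonical-boosts} is the relation $\widehat{K}\circ\widehat{\mathfrak u}(\Lambda_1(t))=\widehat{u}(\Lambda_1(t))\circ\widehat{K}$, not directly the ${\mathbb U}$-intertwining; you should spell out the diagram chase through ${\mathbb T}$ and $K$ (essentially the commutative square that later becomes Proposition~\ref{Prop5.7}) to conclude ${\mathbb U}\,\widehat{u}(\Lambda_1(t))=u(\Lambda_1(t))\,{\mathbb U}$ on the dense range of $\widehat{K}$.

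The paper takes a completely different, intrinsic route: it writes down the generators $K_0=-i\partial_\psi$, $L_1=\omega r\cos_\psi$, $L_2=\omega r\sin_\psi$ on $\widehat{{\mathfrak h}}(S^1)$ and verifies the $\mathfrak{so}(1,2)$ commutation relations directly. The relations $[K_0,L_1]=iL_2$ and $[L_2,K_0]=iL_1$ are immediate; the nontrivial one, $[L_1,L_2]=-iK_0$, is checked on the Fourier basis via the ladder operators $L_\pm=\omega r\,{\rm e}^{\pm i\psi}$, reducing to the identity $\widetilde\omega(k)\bigl(\widetilde\omega(k-1)-\widetilde\omega(k+1)\bigr)=-2k/r^2$. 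This in turn is proved from the explicit $\Gamma$-function formula \eqref{eq:omega-1} for $\widetilde\omega(k)$, yielding along the way the relations $\widetilde\omega(k)\widetilde\omega(k\pm1)=r^{-2}k(k\pm1)+\mu^2$ (Eqs.~\eqref{eq:useful-1}--\eqref{eq:useful-2}). Your argument is shorter and more conceptual, and it makes the equivalence with the covariant picture explicit from the outset. The paper's computation, however, is not merely an alternative: the identities \eqref{eq:useful-1}--\eqref{eq:useful-2} it extracts are reused later to evaluate the Casimir operator on $\widehat{{\mathfrak h}}(S^1)$ and, crucially, in the stress--energy tensor section to show that $L^{(\alpha)}$ has the expected quadratic-form expression in the canonical fields. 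If you go your route, those identities would still need to be derived separately.
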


\begin{proof}
The generator of the rotations  $K_0 =- i \partial_\psi$ has purely discrete spectrum.  
Its eigenfunctions are 
	\[
		e_k= \frac{{\rm e}^{ik\psi}}{\sqrt{2\pi  r } } \; ,  \qquad k \in \mathbb{Z}  \; .
	\]
The generators of the boosts,
	\[
		L_1 = \omega r \,  {\rm cos}_\psi \; , \qquad L_2 = \omega r \, \sin_\psi \; , 
	\]
satisfy the commutator relations 
	\[
		[K_0 \, , \, L_1] =   i L_2 \; , 
		\qquad
		[L_2\, , \, K_0] =  i L_1 \;  . 
	\]
The latter follow from
	\[
		[ - i \partial_\psi \, , \,  \omega r \,  {\rm cos}_\psi ] = i \omega r \,\sin_\psi \; , 
		\qquad 
		[ \omega r \, \sin_\psi \, , \, - i \partial_\psi ] = i \omega r \,  {\rm cos}_\psi \; . 
	\]
To  verify the commutation relation 
$[L_1 \, , \, L_2] = - i K_0 $, we consider the  ladder operators
	\[
		L_\pm = L_1  \pm i L_2  = \omega r \, {\rm e}^{ \pm i \psi }\; . 
	\]
We will show that 
	\[
		\langle e_{k'},  \big[L_+ ,  L_- \big] e_k \rangle_{\widehat{{\mathfrak h}}  (S^1)}
		 = \ -2 \langle e_{k'},  K_0 e_k \rangle_{\widehat{{\mathfrak h}}  (S^1)} \; .
	\]
The latter is equivalent to  
	\begin{equation}
		\label{eq:venoirgtuuwp}
		\widetilde \omega(k)\big(\widetilde \omega(k-1) - \widetilde \omega(k+1) \big) =  - \frac{2k} { {r}^2} \;, \qquad \forall k \in \mathbb{Z} \;.
	\end{equation}
In order to verify this identity, let us first consider only the $\Gamma$-factors occurring in \eqref{eq:omega-1}. 
Define, for $k\in\mathbb{Z}$,
	\[ 
		w (k) \doteq  \frac{ \Gamma\left(\frac{k+s^+}{2}\right) }{
      							\Gamma\left( \frac{k-s^+}{2} \right)}
					\frac{ \Gamma\left(\frac{k-s^+ +1}{2}\right)
						}{\Gamma\left( \frac{k+s^+ +1}{2} \right) }\;.
	\]
One has
	\begin{align*}
		w (k+1) & =  
					\frac{ \Gamma\left(\frac{k+s^+ +1}{2}\right) }{
      							\Gamma\left( \frac{k-s^+ +1}{2} \right) }
					\frac{ \Gamma\left(\frac{k-s^+ +2}{2}\right)
						}{\Gamma\left( \frac{k+s^+ +2}{2} \right)}  
				 =  
				 	\frac{ \Gamma\left(\frac{k+s^+ +1}{2}\right) }{
     						 \Gamma\left( \frac{k-s^+ +1}{2} \right) }
					\frac{ \Gamma\left(\frac{k-s^+}{2}+1\right)
						}{\Gamma\left( \frac{k+s^+}{2} +1\right)}   \\
				& 
				=   
					\frac{k-s^+}{k+s^+} \;  \frac{ \Gamma\left( \frac{k+s^+ +1}{2} \right) }{
      										\Gamma\left( \frac{k-s^+ +1}{2} \right) }
									\frac{ \Gamma\left(\frac{k-s^+}{2}\right)
										}{\Gamma\left( \frac{k+s^+}{2} \right) }  
				= \; \frac{k-s^+}{k+s^+} \; \frac{1}{w (k)} \:,
	\end{align*}
as one easily verifies. Hence, 
	\begin{equation}
		w(k)w(k+1)=\frac{k-s^+}{k+s^+} \;  .
		\label{eq:JUytiuytuiu}
	\end{equation}
Since $\widetilde \omega(k)  =  \frac{{ (k +s^+) } }{r} \, w (k)$, we have
	\[ \widetilde \omega(k) \widetilde \omega(k+1) 
	= {r}^{-2} (k +s^+) (k +s^+ +1)  \frac{k-s^+}{k+s^+}
	= {r}^{-2} (k-s^+)(k +s^+ +1)  \;.
	\]
Hence, we get the two following useful relations:
	\begin{align}
		\label{eq:useful-1}
		\widetilde \omega(k) \widetilde \omega(k+1) & = {r}^{-2} (k-s^+)(k +s^+ +1) = r^{-2} k (k+1) + \mu^2 \\
		\label{eq:useful-2}
		\widetilde \omega(k) \widetilde \omega(k-1) & = {r}^{-2} (k +s^+)(k-s^+-1) 
		= r^{-2} k (k-1) + \mu^2\;.
	\end{align}
The last one is obtained from the previous one by taking $k\to k-1$.
We note that 
	\begin{equation}
		           \label{eq:tilde-omega-and-the-traditional-dispersion-relation}
	\frac{1}{2} \Bigl( 
				\widetilde \omega(k) \widetilde \omega(k+1) +	\widetilde \omega(k) \widetilde \omega(k-1)  \Bigr) = \frac{k^2}{{r}^2}  + \mu^2 \; , 
	\end{equation}
which allows us to establish the usual dispersion relation in the limit $r \to \infty$. 

Relation~(\ref{eq:venoirgtuuwp}) can now be verified using \eqref{eq:useful-1}--\eqref{eq:useful-2}:
	\begin{align*}
		\widetilde \omega(k)\widetilde \omega(k-1) -\widetilde \omega(k)\widetilde \omega(k+1) & = {r}^{-2}
		(k +s^+)(k-s^+-1) \\
		& \qquad - {r}^{-2} (k-s^+)(k +s^+ +1) \\
		& =  - \frac{2k}{ {r}^{2}} \;,
	\end{align*}
as desired. We conclude that 
	\[
		[L_+ , L_-] = - 2 K_0 \; , 
		\qquad [ {\mathfrak k}_0 , L_\pm ] = \pm L_\pm \; ,
	\]
in agreement with $[L_1 \, , \, L_2] = - i K_0 $.
\end{proof}
 
\goodbreak
The operator 
	\[
		\omega r \,  {\rm cos}_\psi 
		= (\omega r \,  {\rm cos}_\psi)_{\upharpoonright I_+} 
		+ (\omega r \,  {\rm cos}_\psi)_{\upharpoonright I_-}
	\]
is the sum of a positive operator $(\omega r \,  {\rm cos}_\psi)_{\upharpoonright I_+}$ 
acting on $\widehat{{\mathfrak h}}  (I_+)$, 
and a negative operator $ (\omega r\,  {\rm cos}_\psi)_{\upharpoonright I_-}$ 
acting on $\widehat{{\mathfrak h}}  (I_-)$. 
Both operators have absolutely continuous spectrum. 
Similar results hold for $I_\alpha$, $\alpha \in [0, 2 \pi)$. 


\begin{theorem} 
\label{1PStrucHe2} 
The triple $  \bigl(\widehat{K} , \widehat{{\mathfrak h}}  (S^1), \widehat{u} \bigr) $
is a {\em one-particle de Sitter structure} for the canonical classical dynamical system 
$\bigl( \, \widehat{\mathfrak k}  (S^1 ) , \widehat\sigma, \widehat{\mathfrak u}   \bigr)$ introduced 
in Proposition~\ref{nocheinlabel}.
In other words, 
\begin{itemize}
\item [$ i.)$] $\widehat{K}$ defines a symplectic map from $(\widehat{\mathfrak k}  (S^1 ) , \widehat\sigma)$ 
to $\bigl( \, \widehat{{\mathfrak h}}  (S^1), \Im \langle \, . \, , 
\, . \, \rangle_{\widehat{{\mathfrak h}}  (S^1)} \bigr)$ and $\widehat{K} \, \widehat{\mathfrak k}  (S^1 )$ 
is dense in $\widehat{{\mathfrak h}}  (S^1)$;
\item [$ ii.)$] 
there exists a unique (anti-) unitary representation of $O(1,2)$ satisfying 
	\begin{equation} 
		\label{eqUHe2}
		 \widehat{u} (\Lambda)\circ \widehat{K} = \widehat{K} \circ \widehat {\mathfrak u} (\Lambda) \; . 
	\end{equation}
Moreover, $\widehat{u} (R_{0}(\alpha)) = R_{0}(\alpha)_*$ for $\alpha \in [0, 2 \pi)$;
\item [$ iii.)$]  for any half-circle\footnote{Given the fact that we consider $\widehat{{\mathfrak h}}  (S^1)$,  it is
more natural to specify a half-circle 
$I_\alpha= R_{0}(\alpha) I_+$. Recall that 
$W^{(\alpha)} = I_\alpha''$.}
 $I_\alpha$, the pre-Bisognano-Wichmann property \cite[p.~75]{Kay3} holds: 
	\begin{equation} 
		\label{5.3b}
		 \widehat{K}  \, \widehat{{\mathfrak k}} (I_\alpha)  \subset {\mathscr D} \bigl( \widehat{u} 
		 (\Lambda_{W^{(\alpha)}} ( i \pi)) \bigr) \; ,
	\end{equation}
and
	\begin{equation}
		\label{5.4b}
		\widehat{u} (\Lambda_{W^{(\alpha)}} ( i \pi)) h	= \widehat{u} (\Theta_{W^{(\alpha)}}) h \; , 
		\qquad h \in \widehat{K} \, \widehat{\mathfrak k}(I_\alpha) \; . 
	\end{equation}
\end{itemize}
\end{theorem}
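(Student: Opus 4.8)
The plan is to establish Theorem~\ref{1PStrucHe2} by transporting the structural properties already proved for the covariant one-particle structure $\bigl(K , {\mathfrak h} (dS), u \bigr)$ in Theorem~\ref{1PStrucHe} across the unitary ${\mathbb U} \colon \widehat{{\mathfrak h}}  (S^1) \to {\mathfrak h}(dS)$ constructed above, together with the identifications $\widehat{K} = \mathbb{u} \circ \widehat{K}_{2 \pi}$ from \eqref{can-geo2} and the chain of isometries $\mathbb{u} \colon \widehat {\mathfrak  d} (S^1) \to \widehat{{\mathfrak h}} (S^1)$, $\widehat{K}_\infty$, $\widehat{K}_{2\pi}$. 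The key point is that the whole network of maps is compatible with the symplectic transformation $\mathbb{T} \colon ({\mathfrak k}(dS),\sigma) \to (\widehat {\mathfrak k}(S^1),\widehat\sigma)$ and with the representations $\mathfrak{u}$, $\widehat{\mathfrak u}$ of $O(1,2)$; once this compatibility is in place, properties $i.)$--$iii.)$ are pulled back essentially verbatim.

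First I would prove $i.)$: $\widehat{K}$ is symplectic because $\widehat{K}_\infty$ is symplectic (Proposition~\ref{OnePartdS}), $\widehat{K}_{2\pi}$ is symplectic (Proposition~\ref{OnePart}, property $ii.)$), and $\mathbb{u}$ is unitary (Proposition~\ref{Prop5.6}), hence preserves $\Im\langle\,.\,,\,.\,\rangle$; alternatively, one reads off symplecticity directly from the definition $\widehat{K}(\mathbb{\phi},\mathbb{\pi}) = \tfrac{1}{\sqrt r}(-\mathbb{\pi} + i\,\omega r\,\mathbb{\phi})$ using $\Im\langle -\mathbb{\pi}_1 + i\omega r\mathbb{\phi}_1, -\mathbb{\pi}_2 + i\omega r\mathbb{\phi}_2\rangle_{\widehat{{\mathfrak h}} (S^1)} = r\bigl(\langle\mathbb{\phi}_1,\tfrac{1}{2\omega}\cdot 2\omega\,\mathbb{\pi}_2\rangle_{L^2} - \langle\mathbb{\pi}_1, \tfrac{1}{2\omega}\cdot 2\omega\,\mathbb{\phi}_2\rangle_{L^2}\bigr)$, which collapses to $\tfrac12\widehat\sigma$ after using $\langle h,h'\rangle_{\widehat{{\mathfrak h}} (S^1)} = \langle h,\tfrac{1}{2\omega}h'\rangle_{L^2(S^1, r{\rm d}\psi)}$ from Proposition~\ref{thhm}$i.)$. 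Density of the range follows from density of the ranges of $\widehat{K}_\infty$ and $\widehat{K}_{2\pi}$ together with surjectivity of $\mathbb{u}$, or directly from Corollary~\ref{tztf} and the proposition identifying $\mathbb{U}\colon \widehat{{\mathfrak h}}  (S^1) \to {\mathfrak h}(dS)$ with dense range.

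Next, for $ii.)$ I would define $\widehat{u}(\Lambda) \doteq {\mathbb U}^{-1} u(\Lambda) {\mathbb U}$ and check the intertwining relation \eqref{eqUHe2}: on the covariant side $u(\Lambda)\circ K = K\circ\mathfrak{u}(\Lambda)$ by Theorem~\ref{1PStrucHe}$ii.)$, and $K$, $\mathbb{T}$, $\mathbb{U}$, $\widehat K$ fit into a commuting square because both are induced by the pull-back $\Lambda_*$ on test functions (Proposition~\ref{nocheinlabel}, Proposition~\ref{porp4.13}); $\widehat{u}(R_0(\alpha)) = R_0(\alpha)_*$ then follows from the explicit boost/rotation formulas \eqref{canonical-boosts}, \eqref{reflections} already derived. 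Uniqueness of the $O(1,2)$-extension is inherited from the covariant side. Finally, for $iii.)$, I would deduce the pre-Bisognano--Wichmann property for the half-circle $I_\alpha$ from the geodesic KMS condition \eqref{5.1}--\eqref{5.4} of Theorem~\ref{1PStrucHe}: since $\mathbb{U}\colon\widehat{{\mathfrak h}}  (I_\alpha)\to{\mathfrak h}({\mathcal O}_{I_\alpha})$ and ${\mathcal O}_{I_\alpha}$ lies in the wedge $W^{(\alpha)} = I_\alpha''$, the inclusion $\widehat K\,\widehat{{\mathfrak k}}(I_\alpha)\subset{\mathscr D}\bigl(\widehat u(\Lambda_{W^{(\alpha)}}(i\pi))\bigr)$ and the boundary-value identity $\widehat u(\Lambda_{W^{(\alpha)}}(i\pi))h = \widehat u(\Theta_{W^{(\alpha)}})h$ follow by conjugating \eqref{5.1}--\eqref{5.4} with $\mathbb{U}$ and using $\mathbb{U}\circ\widehat{\mathfrak u}(\Lambda) = \mathfrak{u}(\Lambda)\circ\mathbb{T}^{-1}\circ\cdots$, i.e. the $O(1,2)$-equivariance established in $ii.)$. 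The main obstacle I anticipate is the bookkeeping needed to verify that the half-circle local subspaces map correctly: one must confirm that $\mathbb{T}$ carries $\widehat{{\mathfrak k}}(I_\alpha)$ into ${\mathfrak k}({\mathcal O}_{I_\alpha})$ and that the various definitions of $\widehat{{\mathfrak h}}  (I_\alpha)$ (via support of $(\Re h, \omega^{-1}\Im h)$) are compatible with the covariant localization — this is where the causal structure of $dS$ and the finite-speed-of-propagation results from Chapter~\ref{geometry} enter, and where the analytic-continuation step for the boost generator ${\rm e}^{it\,\omega r\,{\rm cos}_\psi}$ restricted to $I_\alpha$ must be handled with the same care as in Proposition~\ref{OnePart}$iii.)$.
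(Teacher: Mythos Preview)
Your approach is correct but follows a genuinely different route from the paper. The paper argues \emph{intrinsically} on the canonical side: for $i.)$ it computes $2\Im\langle\widehat K(\mathbb{\phi}_1,\mathbb{\pi}_1),\widehat K(\mathbb{\phi}_2,\mathbb{\pi}_2)\rangle_{\widehat{\mathfrak h}(S^1)}$ directly from $\widehat K(\mathbb{\phi},\mathbb{\pi})=\tfrac{1}{\sqrt r}(-\mathbb{\pi}+i\omega r\mathbb{\phi})$ and $\langle h,h'\rangle_{\widehat{\mathfrak h}(S^1)}=\langle h,\tfrac{1}{2\omega}h'\rangle_{L^2}$; for $ii.)$ it checks \eqref{eqUHe2} separately for rotations (using that $\omega$ commutes with $R_0(\alpha)_*$), boosts (already done in \eqref{canonical-boosts}), and reflections (already done in \eqref{reflections}); and for $iii.)$ it appeals to the canonical KMS machinery, namely \eqref{can-geo} from Proposition~\ref{OnePart} together with $\widehat K=\mathbb{u}\circ\widehat K_{2\pi}$ from \eqref{can-geo2}, to write $\widehat u(\Lambda_W(i\pi))\widehat K=\widehat K\circ\widehat{\mathfrak u}(\Lambda_W(i\pi))=\widehat K\circ\widehat{\mathfrak u}(\Theta_W)=\widehat u(\Theta_W)\widehat K$. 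The covariant Hilbert space ${\mathfrak h}(dS)$ and the unitary $\mathbb U$ do not enter the paper's proof at all.

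Your route---transporting Theorem~\ref{1PStrucHe} through $\mathbb U$---works, but note a logical ordering issue: in the paper, the intertwining property of $\mathbb U$ (Proposition~\ref{Prop5.7}) is proved \emph{after} Theorem~\ref{1PStrucHe2}, and its proof invokes uniqueness of the de~Sitter one-particle structure, which presupposes that $(\widehat K,\widehat{\mathfrak h}(S^1),\widehat u)$ is already known to be such a structure. To avoid circularity you must verify $\mathbb U\circ\widehat K=K\circ\mathbb T^{-1}$ directly from the explicit formula for $\mathbb U$ in \eqref{s1-to-dS} and the Cauchy-data identifications \eqref{sharp-timetestfunction-phipi3}--\eqref{sharp-timetestfunction-phipi4}; this is straightforward but is precisely the content that the paper defers to Proposition~\ref{Prop5.7}. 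What your approach buys is a unified treatment of the two pictures; what the paper's buys is that the canonical structure is established on its own feet, so that the identification with the covariant picture via uniqueness becomes a genuine consequence rather than an input.
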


\goodbreak
\begin{proof} \quad 
\begin{itemize}
\item [$ i.)$] 	
Clearly, $C^\infty (S^1) + i \omega  C^\infty (S^1)$ is dense in $\widehat{{\mathfrak h}}  (S^1)$. 
To verify that $\widehat{K}$ is a symplectic map, compute
\begin{align*}
2 \Im \langle \widehat{K}(\mathbb{\phi}_1,\mathbb{\pi}_1) , \widehat{K}(\mathbb{\phi}_2,\mathbb{\pi}_2) \rangle_{\widehat{{\mathfrak h}}  (S^1)}
& = 2 \Im \langle -\mathbb{\pi}_1 +  i \,\omega r \, \mathbb{\phi}_1  , -\mathbb{\pi}_2 
+  i \,\omega r \, \mathbb{\phi}_2  \rangle_{\widehat{{\mathfrak h}}  (S^1)}
\\
& =   \langle \mathbb{\phi}_1,\mathbb{\pi}_2 \rangle_{L^2(S^1, r {\rm d} \psi ) } - \langle \mathbb{\pi}_1   , 
\mathbb{\phi}_2  \rangle_{L^2(S^1, r {\rm d} \psi)} 
\\
&=		\widehat \sigma\big((\mathbb{\phi}_1,\mathbb{\pi}_1),(\mathbb{\phi}_2,\mathbb{\pi}_2)\big) \; . 
\end{align*}
\item[$ ii.)$]
For $\Lambda=
R_{0}$ a rotation, we have 
	\begin{align*}
		(\widehat{u}(R_{0}) \circ \widehat{K})(\mathbb{\phi},\mathbb{\pi})&
		= (R_{0}) _* \left(-\mathbb{\pi}+ i\omega r \, \mathbb{\phi} \right) \nonumber \\
		&=- (R_{0})_*\mathbb{\pi} +  i\omega r \,  (R_{0})_*\mathbb{\phi}
		\nonumber \\
		&= \widehat{K} \bigl( (R_{0})_*\mathbb{\phi}, (R_{0})_*\mathbb{\pi} \bigr) 
		= \bigl( \widehat{K} \circ \widehat{{\mathfrak u}} (R_{0})\bigr) (\mathbb{\phi},\mathbb{\pi})  \; , 
	\end{align*}
since $\omega$ commutes with the pullback $(R_{0})_*$ of a rotation. 
For the boosts, see \eqref{canonical-boosts}; and for the reflections, 
see \eqref{reflections}.

\item [$ iii.)$] for $(\mathbb{\phi},\mathbb{\pi}) \in  \widehat{\mathfrak k}(I_\alpha)$, 
	\begin{align*}
		\widehat u (\Lambda_W ( i \pi)) \widehat{K} (\mathbb{\phi},\mathbb{\pi}) &=
		\widehat K \circ \widehat {\mathfrak u} (\Lambda_W ( i \pi)) \,  (\mathbb{\phi},\mathbb{\pi})
		\nonumber \\
		&= \widehat K \circ  \widehat {\mathfrak u} (\Theta_{W}) \, (\mathbb{\phi},\mathbb{\pi})
		\nonumber \\
		&= \widehat u (\Theta_{W}) \, \widehat{K} (\mathbb{\phi},\mathbb{\pi})  \; , 
	\end{align*}
which demonstrates both \eqref{5.3b} and \eqref{5.4b}. 
The first equality follows from combining 
\eqref{can-geo} and \eqref{can-geo2}.
\end{itemize}
\end{proof}

\goodbreak

\begin{proposition} 
\label{Prop5.7}
There exists a unitary map ${\mathbb U}$ from $\widehat{{\mathfrak h}}  (S^1)$ to ${\mathfrak h} (dS)$, 
which  intertwines  the representations
$\widehat{u} (\Lambda)$ and 
$ u (\Lambda)$, $ \Lambda \in O(1,2)$, and the one-particle structures.
In other words, the following diagram commutes:

\begin{picture}(250,140)


\put(60,80){$\nearrow$}
\put(0,70){${\mathcal D}_{\mathbb{R}} (dS)$}
\put(220,70){${\mathbb U} $}
\put(130,70){${\mathbb T} $}
\put(50,95){$\widehat  {\mathbb P}$}
\put(50,45){$ {{\mathbb P}}$}
\put(60,55){$\searrow$}

\put(90,110){$ (\widehat {\mathfrak k} (S^1 ), \widehat {\mathfrak u})$}
\put(160,110){$\longrightarrow$}
\put(190,110){$(\widehat{{\mathfrak h}}  (S^1), \widehat{u})$}

\put(90,30){$({\mathfrak k} (dS), {\mathfrak u})$}
\put(160,30){$\longrightarrow$}
\put(190,30){$({\mathfrak h} (dS), u)\; .$}

\put(165,40){$K$}
\put(165,120){$ \widehat{K}$}

\put(120,85){\vector(0,-3){20}}
\put(210,85){\vector(0,-3){20}}

\end{picture}

\vskip -.8cm
\noindent
Moreover, the restricted map 
	\begin{equation}
		\label{localpart}
		{\mathbb U}  \colon
		 \widehat{{\mathfrak h}}  (I)
		  \mapsto 
		  {\mathfrak h} ({\mathcal O}_I ) \;,  \qquad I \subset S^1 \; , 
	\end{equation}
is unitary too. 
\end{proposition}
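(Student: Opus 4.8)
The plan is to construct $\mathbb{U}$ as a composition of maps already built in the excerpt, and then check that the diagram commutes and that the local structure is respected. Concretely, I would set
\[
\mathbb{U} \doteq {\mathbb T}^{-1} \text{ followed by the identification } \widehat{{\mathfrak h}}(S^1) \cong {\mathfrak h}(dS),
\]
but more precisely I would factor $\mathbb{U}$ through the unitary $\mathbb{u} \colon \widehat{\mathfrak d}(S^1) \to \widehat{{\mathfrak h}}(S^1)$ of Proposition~\ref{Prop5.6}, the one-particle maps $\widehat{K}$ and $K$, and the symplectic isomorphism $\mathbb{T} \colon ({\mathfrak k}(dS),\sigma,{\mathfrak u}) \to (\widehat{\mathfrak k}(S^1),\widehat\sigma,\widehat{\mathfrak u})$. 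Since $\widehat{K}$ and $K$ both have dense range and are symplectic with the prescribed (anti-)unitary intertwining relations \eqref{eqUHe} and \eqref{eqUHe2}, and since $\mathbb{T}$ intertwines ${\mathfrak u}$ and $\widehat{\mathfrak u}$, the obvious candidate is the unique map making
\[
\mathbb{U} \circ \widehat{K} = K \circ \mathbb{T}^{-1}
\]
hold on the dense subspace $\widehat{K}\,\widehat{\mathfrak k}(S^1) \subset \widehat{{\mathfrak h}}(S^1)$. The first step is therefore to verify that this prescription is well-defined and isometric: if $\widehat{K}\,\widehat f_1 = \widehat{K}\,\widehat f_2$ then $K \mathbb{T}^{-1}\widehat f_1 = K\mathbb{T}^{-1}\widehat f_2$ (because both $\widehat K$ and $K\circ\mathbb{T}^{-1}$ have the same kernel, namely the kernel of the symplectic form, which is trivial here since $\widehat\sigma$ and $\sigma$ are non-degenerate), and that $\|K\mathbb{T}^{-1}\widehat f\|_{{\mathfrak h}(dS)} = \|\widehat{K}\widehat f\|_{\widehat{{\mathfrak h}}(S^1)}$; this last equality is exactly the content of Theorem~\ref{st-kappa}, which computes $\|[\delta_k\otimes h]\|_{{\mathfrak h}(dS)} \to \|h\|_{\widehat{{\mathfrak h}}(S^1)}$ and $\|[n(\delta_k\otimes g)]\|_{{\mathfrak h}(dS)} \to \|\omega r g\|_{\widehat{{\mathfrak h}}(S^1)}$, combined with the explicit form of $\widehat K$. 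Alternatively, and perhaps more cleanly, one can simply take $\mathbb{U}$ to be the map ${\mathbb U}$ already introduced before Corollary~\ref{tztf} (the linear extension of $h_1 + i\omega r h_2 \mapsto [\delta\otimes h_1] + [\delta'\otimes h_2]$), which was there shown to be unitary and to send $\widehat{{\mathfrak h}}(I)$ to ${\mathfrak h}({\mathcal O}_I)$; then the only new work is to identify it with ${\mathbb T}^{-1}$ composed with $K$ on Cauchy data and to check the intertwining of the Lorentz representations.

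The second step is to check commutativity of the diagram, i.e.\ $\mathbb{U}\circ\widehat{{\mathbb P}} = K\circ\mathbb{T}^{-1}\circ\widehat{{\mathbb P}}$ agrees with $K\circ{\mathbb P}$ applied after nothing — more precisely that the two paths ${\mathcal D}_{\mathbb R}(dS) \xrightarrow{\widehat{{\mathbb P}}} \widehat{\mathfrak k}(S^1) \xrightarrow{\widehat K} \widehat{{\mathfrak h}}(S^1) \xrightarrow{\mathbb U} {\mathfrak h}(dS)$ and ${\mathcal D}_{\mathbb R}(dS) \xrightarrow{{\mathbb P}} {\mathfrak k}(dS) \xrightarrow{K} {\mathfrak h}(dS)$ coincide. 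This follows because $\mathbb{T}\circ{\mathbb P} = \widehat{{\mathbb P}}$ (the canonical projection to Cauchy data factors through the covariant one), which is essentially the definition of $\mathbb{T}$ together with the relation between ${\mathbb P}$ and $\widehat{{\mathbb P}}$; hence $K\circ{\mathbb P} = K\circ\mathbb{T}^{-1}\circ\widehat{{\mathbb P}} = \mathbb{U}\circ\widehat{K}\circ\widehat{{\mathbb P}}$ by the defining relation of $\mathbb{U}$. The intertwining $\mathbb{U}\circ\widehat u(\Lambda) = u(\Lambda)\circ\mathbb{U}$ for $\Lambda\in O(1,2)$ then follows formally: on the dense range of $\widehat K$ one has
\[
\mathbb{U}\,\widehat u(\Lambda)\,\widehat K = \mathbb{U}\,\widehat K\,\widehat{\mathfrak u}(\Lambda) = K\,\mathbb{T}^{-1}\,\widehat{\mathfrak u}(\Lambda) = K\,{\mathfrak u}(\Lambda)\,\mathbb{T}^{-1} = u(\Lambda)\,K\,\mathbb{T}^{-1} = u(\Lambda)\,\mathbb{U}\,\widehat K,
\]
using \eqref{eqUHe2}, the intertwining property of $\mathbb{T}$, and \eqref{eqUHe}; density and (anti-)unitarity of both sides then give the identity everywhere. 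For the anti-unitary generators $T$ and $P_2$ one uses the same argument with the corresponding clauses of the propositions defining $u(T), u(P_2)$ and $\widehat u(T), \widehat u(P_1T)$, being careful that on the $S^1$ side the relevant reflection representers were given in terms of $(P_1)_*$, $T_*$ and complex conjugation $C$.

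The third and last step is the statement about the restricted map \eqref{localpart}. Here I would invoke the local intertwining already proven: the map ${\mathbb U}$ of the earlier proposition was shown (Corollary~\ref{tztf} and the proposition right after it) to restrict to a unitary $\widehat{{\mathfrak h}}(I) \to {\mathfrak h}({\mathcal O}_I)$ for $I\subset S^1$ an open interval, with ${\mathcal O}_I = I''$. Once $\mathbb{U}$ has been identified with this ${\mathbb U}$ (or shown to agree with it — this identification is the one genuinely fiddly point, requiring one to match the explicit formula $h_1 + i\omega r h_2 \mapsto [\delta\otimes h_1]+[\delta'\otimes h_2]$ against $K\circ\mathbb{T}^{-1}\circ\widehat K^{-1}$ via Corollary~\ref{tztf} $iii.), iv.)$), the local statement is immediate. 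The main obstacle I anticipate is precisely this bookkeeping of which concrete unitary one is talking about and the consistent tracking of the factors of $r$, $\omega$, and the signs coming from $n$ and from the $\mathrm{e}^{i\pi s^+}$ phases in the time-reflection clause; the conceptual content — that two one-particle structures over symplectically isomorphic classical systems with intertwined group actions are unitarily equivalent — is standard, so the real content is verifying the norm identity, which is supplied by Theorem~\ref{st-kappa}, and checking that $\mathbb{T}$ really does intertwine the Cauchy-data projections.
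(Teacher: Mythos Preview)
Your proposal is correct and essentially self-contained, but it takes a somewhat different route from the paper's own proof. The paper does not construct $\mathbb{U}$ directly via the relation $\mathbb{U}\circ\widehat{K}=K\circ\mathbb{T}^{-1}$ and then verify isometry and intertwining by hand. Instead, it invokes the abstract uniqueness theorem for double $(2\pi r)$-KMS one-particle structures (Kay's Theorem~\ref{ThB1}): both $(K,{\mathfrak h}(dS),u)$ and $(\widehat{K},\widehat{{\mathfrak h}}(S^1),\widehat{u})$ restrict to double $(2\pi r)$-KMS structures over the (symplectically isomorphic) classical systems, and since the boost generator has no zero eigenvalue, they must be unitarily equivalent. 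This gives $\mathbb{U}$ for free, including the intertwining of $u$ and $\widehat{u}$. Only for the local statement \eqref{localpart} does the paper then identify $\mathbb{U}$ explicitly with the map $h_1+i\omega r\,h_2\mapsto[\delta\otimes h_1]+[\delta'\otimes h_2]$ already introduced before Corollary~\ref{tztf}, exactly as in your ``alternative'' suggestion.

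Your constructive approach has the advantage of being elementary and of making the diagram commute by direct inspection rather than by appeal to an abstract uniqueness result; the paper's approach is shorter and makes clear conceptually why the unitary must exist. One small point you should tighten: Theorem~\ref{st-kappa} gives only the diagonal norm identities $\|[\delta\otimes h]\|_{{\mathfrak h}(dS)}=\|h\|_{\widehat{{\mathfrak h}}(S^1)}$ and $\|[\delta'\otimes g]\|_{{\mathfrak h}(dS)}=\|\omega r\,g\|_{\widehat{{\mathfrak h}}(S^1)}$; to conclude isometry on the full range of $\widehat{K}$ you also need that the cross term $\Re\langle[\delta\otimes h_1],[\delta'\otimes h_2]\rangle_{{\mathfrak h}(dS)}$ vanishes. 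This follows from the conjugation $\kappa$ induced by time reflection (see \eqref{kappat}): $[\delta\otimes h_1]\in{\mathfrak h}^\kappa(dS)$ while $[\delta'\otimes h_2]$ is anti-invariant under $\kappa$, and for any conjugation the inner product of a $\kappa$-invariant vector with a $\kappa$-anti-invariant vector is purely imaginary. The matching vanishing on the $\widehat{{\mathfrak h}}(S^1)$ side is immediate since $h_1$ is real and $i\omega r\,h_2$ is purely imaginary in $\widehat{{\mathfrak h}}(S^1)$.
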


\goodbreak
\begin{proof} The existence of ${\mathbb U}$ 
follows from the uniqueness of the de Sitter one-particle structure.
The latter is a direct consequence of the uniqueness of the $(2 \pi r)$-KMS structure for the double wedge, 
see~\ref{ThB1}. The local part, Eq.~\eqref{localpart}, follows from Lemma \ref{tztf}:
for $ h\in {\mathcal D} (S^1)$ and
	\begin{align*} 
		 f (x ) \equiv (\delta \otimes h) (x ) &=	
		 \delta (x_0)  \;  h  ( \psi ) \;  , 
	 \nonumber \\
	g (x ) \equiv (\delta' \otimes h) (x ) &= 
	\left( \frac{\partial}{\partial x_0 } \delta \right) (x_0)  \;   h  (\psi ) \;  , 
	\end{align*} 
with $x  \equiv x  (t, \psi)$ the coordinates introduced in \eqref{w1psitau}, 
the  Cauchy data for the corresponding 
solutions $\mathbb{f}, \mathbb{g}$ of the 
Klein--Gordon equation are:
	\begin{align*}
		\big( \mathbb{f}_{ \upharpoonright S^1}, 
							 (n \, \mathbb{f} )_{\upharpoonright S^1})
   &= (0, - h) \equiv  (\mathbb{\phi}, \mathbb{\pi}) \; , 
\nonumber \\
	\big( \mathbb{g}_{ \upharpoonright S^1}, 
							 (n \, \mathbb{g} )_{\upharpoonright S^1})
 		  &= (  h, 0) \equiv  (\mathbb{\phi}, \mathbb{\pi}) \; . 
	\end{align*}
Together with $\widehat{K}\,(\mathbb{\phi},\mathbb{\pi}) = - \mathbb{\pi} +  i \,\omega r \, \mathbb{\phi}$ this gives 
	\begin{align*}
		\widehat{K}\, \big( \mathbb{f}_{ \upharpoonright S^1}, 
							 (n \, \mathbb{f} )_{\upharpoonright S^1})
   		&= h  \; , 
\nonumber \\
	\widehat{K}\, \big( \mathbb{g}_{\upharpoonright S^1}, 
							 (n \, \mathbb{g} )_{\upharpoonright S^1})
 		  &= i \omega r \, h  \; , 
	\end{align*}
both elements\footnote{As mentioned before,  $C^\infty (S^1) \subset {\mathscr D}(\omega)$.} 
of $\widehat{{\mathfrak h}}  (S^1)$. Finally, the unitary map ${\mathbb U}
\colon \widehat{{\mathfrak h}}  (S^1) \to {\mathfrak h}(dS)$ is the linear extension of the map
	\[
		h_1 + i \omega r \, h_2 \mapsto [ \delta \otimes h_1 ] + 
		[ \delta' \otimes h_2 ] \; . 
	\]
The latter shows that ${\mathbb U}
\colon \widehat{{\mathfrak h}}  (I) \to {\mathfrak h}({\mathcal O}_I)$, with ${\mathcal O}_I= I''$ 
the causal completion of $I \subset S^1$. 
\end{proof}

\goodbreak
\begin{corollary} 
\label{halpha}
Let $I$ be an open subset in $S^1$. 
The unitary group $t \mapsto {\rm e}^{i t \omega r \,  {\rm cos}_{\psi}}$ 
maps $\widehat{{\mathfrak h}}  (I)$ to
\[
\widehat{{\mathfrak h}}  \Bigl( \bigl( \Gamma^+(\Lambda_1(t)I) \cup \Gamma^-(\Lambda_1(t)I)  \bigr) \cap S^1 \Bigr) \; . 
\]
In particular, 
the unitary group $t \mapsto {\rm e}^{it (\omega r \,  {\rm cos}_\psi)_{\upharpoonright I_\pm}}$ 
leaves $\widehat{{\mathfrak h}}  (I_\pm)$ invariant.
\end{corollary}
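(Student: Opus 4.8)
�The plan is to deduce Corollary~\ref{halpha} from the geometric/covariant picture established in Proposition~\ref{Prop5.7}, together with the finite-speed-of-propagation results of Chapter~\ref{geometry} (Lemma~\ref{FSoL} and Proposition~\ref{ialpha}) and the corresponding statement at the level of Cauchy data, Proposition~\ref{fsol}. The key observation is that the unitary group $t\mapsto{\rm e}^{it\,\omega r\,{\rm cos}_\psi}$ on $\widehat{{\mathfrak h}}(S^1)$ is, by the computation \eqref{canonical-boosts}, nothing but $\widehat{u}(\Lambda_1(t))$, the one-particle representer of the boost $\Lambda_1(t)$; and by \eqref{eqUHe2} it intertwines with $\widehat{\mathfrak u}(\Lambda_1(t))$ on the canonical classical dynamical system. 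So the statement about $\widehat{{\mathfrak h}}(I)$ is the Hilbert-space shadow of the purely classical statement about how $\widehat{\mathfrak u}(\Lambda_1(t))$ moves the local symplectic subspaces $\widehat{\mathfrak k}(I)$.

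First I would fix an open interval $I\subset S^1$ and a function $h\in\widehat{{\mathfrak h}}(I)$. Using the unitary ${\mathbb U}\colon\widehat{{\mathfrak h}}(S^1)\to{\mathfrak h}(dS)$ of Proposition~\ref{Prop5.7}, which by \eqref{localpart} restricts to a unitary $\widehat{{\mathfrak h}}(I)\to{\mathfrak h}({\mathcal O}_I)$ and intertwines $\widehat{u}(\Lambda)$ with $u(\Lambda)$, it suffices to prove the analogous statement in ${\mathfrak h}(dS)$: that $u(\Lambda_1(t)){\mathfrak h}({\mathcal O}_I)\subset{\mathfrak h}(\Lambda_1(t){\mathcal O}_I)$ and that $\Lambda_1(t){\mathcal O}_I$ has base on $S^1$ equal to the claimed set. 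The inclusion $u(\Lambda){\mathfrak h}({\mathcal O})={\mathfrak h}(\Lambda{\mathcal O})$ is exactly Proposition~\ref{UIR-FH}. It then remains to identify the region: ${\mathcal O}_I=I''$ is a double cone, $\Lambda_1(t){\mathcal O}_I$ is again a double cone, and its intersection with the Cauchy surface $S^1$ is the causal shadow on $S^1$ of $\Lambda_1(t)I$, i.e. $\bigl(\Gamma^+(\Lambda_1(t)I)\cup\Gamma^-(\Lambda_1(t)I)\bigr)\cap S^1$; this is precisely the content of Lemma~\ref{FSoL}/Proposition~\ref{ialpha} (with $\alpha=0$), and at the level of Cauchy data it is Proposition~\ref{fsol}. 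Pulling back through ${\mathbb U}$ gives the first assertion of the corollary, namely $t\mapsto{\rm e}^{it\,\omega r\,{\rm cos}_\psi}$ maps $\widehat{{\mathfrak h}}(I)$ into $\widehat{{\mathfrak h}}\bigl(\bigl(\Gamma^+(\Lambda_1(t)I)\cup\Gamma^-(\Lambda_1(t)I)\bigr)\cap S^1\bigr)$.

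For the special case $I=I_\pm$, I would invoke Remark $ii.)$ after Proposition~\ref{ialpha}: for $I\subset I_\alpha$ one has $I(\alpha,\tau)\subset I_\alpha$ for all $\tau\ge0$, together with $\bigcup_{t\in\mathbb R}\Lambda^{(\alpha)}(t)I_\alpha=W^{(\alpha)}$ and $W^{(\alpha)}\cap S^1=I_\alpha$; equivalently, $\Lambda_1(t)W_1=W_1$, so $\Lambda_1(t){\mathcal O}_{I_+}$ stays inside the wedge $W_1$ and its base on $S^1$ stays inside $I_+=W_1\cap S^1$, for all $t\in\mathbb R$ (both positive and negative, since $\Lambda_1(-t)$ also preserves $W_1$). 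Hence ${\rm e}^{it\,\omega r\,{\rm cos}_\psi}$ leaves $\widehat{{\mathfrak h}}(I_+)$ invariant, and since $(\omega r\,{\rm cos}_\psi)$ does not mix the subspaces $\widehat{{\mathfrak h}}(I_+)$ and $\widehat{{\mathfrak h}}(I_-)$ (the multiplication operator $\cos_\psi$ and $\varepsilon$, hence $\omega$, respect the decomposition $L^2(S^1)=L^2(I_+)\oplus L^2(I_-)$, cf. Lemma~\ref{Lm3.5} and Corollary~\ref{MagicForm}), the restriction $(\omega r\,{\rm cos}_\psi)_{\upharpoonright I_\pm}$ generates a unitary group on $\widehat{{\mathfrak h}}(I_\pm)$ separately; the analogous statement for $I_\alpha$ follows by conjugating with the rotation $\widehat{u}(R_0(\alpha))$, which is geometric by Theorem~\ref{UIR-S1}.

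The main obstacle I anticipate is bookkeeping about the local subspaces rather than anything deep: one must be careful that the definition of $\widehat{{\mathfrak h}}(I)$ in Definition~\ref{local-h-hat} (support conditions on $\Re h$ and $\omega^{-1}\Im h$) is genuinely the one transported by ${\mathbb U}$ to ${\mathfrak h}({\mathcal O}_I)$ — this is the content of Proposition~\ref{Prop5.7} and Corollary~\ref{tztf}, so it is available, but the matching of ``support in $I$'' with ``causal completion ${\mathcal O}_I$'' and then with the causal shadow of $\Lambda_1(t){\mathcal O}_I$ on $S^1$ needs the finite-speed-of-propagation results cited above to be applied with the correct endpoints. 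A secondary technical point is that $\omega r\,{\rm cos}_\psi$ is a genuine pseudo-differential operator, so ``does not mix $\widehat{{\mathfrak h}}(I_+)$ and $\widehat{{\mathfrak h}}(I_-)$'' must be justified through the factorisation of $\omega$ in Corollary~\ref{MagicForm} and the locality of $\varepsilon$ relative to the half-circles (Lemma~\ref{Lm3.5}), not by a naive locality argument.
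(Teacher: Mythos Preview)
Your approach is correct and essentially the same as the paper's: both deduce the result from the fact that ${\rm e}^{it\,\omega r\,{\rm cos}_\psi}=\widehat{u}(\Lambda_1(t))$ implements the boost geometrically on ${\mathfrak k}(dS)$ (via the intertwining with $u(\Lambda_1(t))$ through $\mathbb U$), so a distribution localised on $I$ is mapped to one localised on $\Lambda_1(t)I$, and one then reads off the Cauchy-data support on $S^1$ and extends by continuity. You are more explicit about invoking Proposition~\ref{Prop5.7}, Proposition~\ref{UIR-FH} and Proposition~\ref{fsol}; the paper compresses all of this into three sentences.

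One small correction to your aside: the claim that ``$\omega$ respects the decomposition $L^2(S^1)=L^2(I_+)\oplus L^2(I_-)$'' is not supported by Corollary~\ref{MagicForm}, since the formula for $\omega$ there contains $(P_1)_*$, which swaps $I_+$ and $I_-$. Fortunately you do not need this: the invariance of $\widehat{{\mathfrak h}}(I_\pm)$ already follows from the main geometric argument applied to $I=I_\pm$ (using $\Lambda_1(t)W_1=W_1$, as you note), and once the unitary group preserves $\widehat{{\mathfrak h}}(I_\pm)$, the restriction $(\omega r\,{\rm cos}_\psi)_{\upharpoonright I_\pm}$ of its generator is well-defined automatically. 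So drop the algebraic justification and let the ``in particular'' follow directly from the first assertion.
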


\begin{proof} This is a direct consequence of the fact that ${\rm e}^{it \omega \,  {\rm cos}_{\psi}}$ implements the 
Lorentz boost $\Lambda_1(t)$. The latter act geometrically on ${\mathfrak k} (dS)$, \emph{i.e.}, 
a distribution supported at $I \subset S^1 \subset dS$ is mapped to a distribution supported at $\Lambda_1(t) I \subset dS$. 
This result extends by continuity to $\widehat{{\mathfrak h}}  (I)$. 
\end{proof}

\begin{corollary}
The unitary representation $\widehat{u} (\Lambda)$, $\Lambda \in SO_0(1,2)$, defined by 
Eq.~\eqref{eqUHe2},
is irreducible.
\end{corollary}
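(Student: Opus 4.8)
The plan is to deduce irreducibility from Bargmann's theorem (Theorem~\ref{TH-irr}) by transporting it along the chain of unitary intertwiners already at our disposal. First I would use Proposition~\ref{Prop5.7}, which provides a unitary $\mathbb U \colon \widehat{{\mathfrak h}}(S^1) \to {\mathfrak h}(dS)$ with $\mathbb U\,\widehat u(\Lambda) = u(\Lambda)\,\mathbb U$ for all $\Lambda \in SO_0(1,2)$; hence $\widehat u$ is irreducible precisely when $u$ on ${\mathfrak h}(dS)$ is. Next I would pass from ${\mathfrak h}(dS)$ to $\widetilde{{\mathfrak h}}_\nu(\partial V^+)$: the Lemma preceding \eqref{ophs2} identifies $\ker {\mathcal F}_{+ \upharpoonright \nu}$ with $\ker{\mathbb P}$, so the Fourier--Helgason transform descends to an injective map on the quotient, and by the very definition \eqref{ophs2a} of the scalar product on ${\mathfrak h}(dS)$ it is an isometry $U_\nu$; the chain of equalities in the proof of Proposition~\ref{UIR-FH} shows $U_\nu\, u(\Lambda) = \widetilde u_\nu^{+}(\Lambda)\, U_\nu$. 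Since ${\mathfrak h}(dS)$ is, by construction, the completion of the image of the test functions under this transform, $U_\nu$ is onto, hence unitary, so $u \cong \widetilde u_\nu^{+}$, and Theorem~\ref{TH-irr} closes the argument.

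If one prefers an argument internal to $\widehat{{\mathfrak h}}(S^1)$ --- which also avoids leaning on the surjectivity of $U_\nu$, i.e.\ on the Plancherel theorem (Theorem~\ref{plancherel}) --- I would instead run Bargmann's commutant computation directly with the generators exhibited in the proof of Theorem~\ref{UIR-S1}. Let $A$ be bounded and commute with every $\widehat u(\Lambda)$, $\Lambda \in SO_0(1,2)$. Commuting with $\widehat u(R_0(\alpha)) = {\rm e}^{i\alpha K_0}$ forces $A$ to commute with the spectral projections of $K_0 = -i\partial_\psi$, whose spectrum is simple and discrete with eigenvectors $e_k = {\rm e}^{ik\psi}/\sqrt{2\pi r}$, $k\in\mathbb Z$, so $A e_k = \alpha_k e_k$ for scalars $\alpha_k$. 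Commuting with $\widehat u(\Lambda_1(t)) = {\rm e}^{it\omega r\,{\rm cos}_\psi}$ and with the rotated boosts, $A$ commutes with $L_\pm = \omega r\,{\rm e}^{\pm i\psi}$ on the smooth vectors; since $L_\pm e_k = r\,\widetilde\omega(k\pm1)\, e_{k\pm1}$ with $\widetilde\omega(k)>0$ for all $k$ (the positivity established after \eqref{eq:omega-1}), applying $A$ to $L_{+}e_k$ yields $\alpha_{k+1} = \alpha_k$; induction gives $\alpha_k \equiv \alpha_0$, whence $A = \alpha_0\,\mathbb 1$ on the total set $\{e_k\}$, so the commutant is trivial.

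I expect the only real obstacle, in either route, to be the non-degeneracy input: the surjectivity of the Fourier--Helgason isometry in the first argument, or the non-vanishing of $L_\pm e_k$ in the second. Both are controlled by the strict positivity of the one-particle frequencies $\widetilde\omega(k)$, which has already been verified for the principal and the complementary series alike; everything else is a routine transcription of Bargmann's argument.
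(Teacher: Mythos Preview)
Your first route is precisely the paper's argument: the proof there states in one line that, by construction, $\widehat u$ is unitarily equivalent to the irreducible $\widetilde u$ on $\widetilde{{\mathfrak h}}(\partial V^+)$, and then computes the Casimir on the basis $\{e_k\}$ of $\widehat{{\mathfrak h}}(S^1)$ to confirm the eigenvalue $\zeta^2$. Your second route --- running Bargmann's commutant argument directly on $\widehat{{\mathfrak h}}(S^1)$ with the ladder operators $L_\pm=\omega r\,{\rm e}^{\pm i\psi}$ and the strict positivity $\widetilde\omega(k)>0$ --- is a valid self-contained alternative that the paper does not spell out; it also sidesteps the surjectivity issue you flag in Route~1, which is in any case automatic: the image of your isometric intertwiner $U_\nu$ is a nonzero closed $\widetilde u_\nu^+$-invariant subspace of the irreducible target and hence all of $\widetilde{{\mathfrak h}}_\nu$.
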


\begin{proof} By construction, 
$\widehat{u}$ is unitarily equivalent to the unitary irreducible representation 
$\widetilde{u}$ on~$\widetilde{{\mathfrak h}}(\partial V^+)$. In fact, the Casimir operator takes the form 
	\[		C^2  = - K_0^2 + \frac{1}{2} (L_+ L_- + L_- L_+) \; . 
	\]
Its off-diagonal matrix elements vanish and the diagonal matrix elements equal~$\zeta^2$: 
	\begin{align*}
		\frac{ \langle e_{k}, \; C^2  e_k \rangle_{\widehat{{\mathfrak h}}  (S^1)}}{
					\big\| e_{k}\big\|^2_{\widehat{{\mathfrak h}}  (S^1)} } 
		& =  -k^2 +\frac{{r}^2}{2}\big( \widetilde \omega(k)\widetilde \omega(k-1)+\widetilde \omega(k)\widetilde \omega(k+1) \big) \\ \; 
		& =
			-k^2 +\frac{1}{2} \big( (k +s)(k-s-1) + (k-s)(k +s+1) \big) \\
		&=  -s(s+1) = \tfrac{1}{4} + \nu^2 = \zeta^2 \; , 
	\end{align*}
as expected. In the second but last equality we have used \eqref{eq:useful-1}--\eqref{eq:useful-2}.
\end{proof}

\chapter{Second Quantization}
\label{2Q}
\setcounter{equation}{0}


Let $({\mathfrak k} , \sigma)$ be a symplectic space. The unique $C^*$-algebra ${\mathfrak W} ({\mathfrak k}, \sigma)$ generated
by nonzero elements ${\rm W}({\mathfrak f})$, ${\mathfrak f} \in {\mathfrak k}$, satisfying 
\label{weylalgebrapage}
	\begin{align}
		{\rm W} ({\mathfrak f}_1){\rm W} ({\mathfrak f}_2) 
		&= {\rm e}^{- i \sigma ( {\mathfrak f}_1,{\mathfrak f}_2 ) /2} {\rm W} ({\mathfrak f}_1+{\mathfrak f}_2) \; ,
		\nonumber \\ 
		\label{weylalgebra} 
		{\rm W} ^*({\mathfrak f}) &= {\rm W} (-{\mathfrak f}) \; ,  \qquad {\rm W} (0)=\mathbb{1} \; ,  
	\end{align}
is called the {\em Weyl algebra} associated to $({\mathfrak k} , \sigma)$;
see, \emph{e.g.}, \cite{BR}. 
In case ${\mathfrak k}$ is a Hilbert space, 
we suppress the dependence on the symplectic form given by twice the imaginary part of the scalar product.

\goodbreak
\begin{definition} 
Set 
\[
{\mathfrak W}(X) \equiv {\mathfrak W}({\mathfrak k}(X), \sigma) \; , \qquad X= {\mathcal O}, W, dS \; . 
\]
Let $\Lambda \mapsto {\mathfrak u}  (\Lambda)$ be the  representation of $SO_0(1,2)$ on~${\mathfrak k} (dS)$; 
see Proposition~\ref{Prop4-10}. Define a group of automorphisms 
$\alpha^\circ \colon \Lambda \mapsto \alpha_\Lambda^\circ$ acting on ${\mathfrak W}(dS)$ by
\label{alphapage}
	\[
		\alpha_\Lambda^\circ (W ( [f] ))=W ( {\mathfrak u}  (\Lambda) [f] ) \; , 
		\qquad  [f] \in {\mathfrak k} (dS) \; . 
	\]
The pair $\bigl({\mathfrak W}(dS), \alpha^\circ \bigr)$ is called the {\em covariant quantum 
dynamical system} associated to the Klein--Gordon equation on the de Sitter space.
\label{cqds-page}
\end{definition}

There is no global time evolution (in terms of a 1-parameter group of automorphisms) 
on ${\mathfrak W}(dS)$. In fact, there is not even a globally time-like Killing vector field on the 
de Sitter space. Nevertheless, the automorphisms~$\alpha^\circ $ respect the local structure:
	\[
		\alpha_\Lambda^\circ \bigl({\mathfrak W}({\mathcal O})\bigr)
		={\mathfrak W}(\Lambda {\mathcal O}) \;  , \qquad {\mathcal O} \subset dS \; . 
	\]
The map $ \alpha^\circ \colon \Lambda \mapsto \alpha_\Lambda^\circ$ fails to be strongly continuous in the $C^*$-norm; 
thus strictly speaking $\bigl({\mathfrak W}(dS), \alpha^\circ \bigr)$ is not a $C^*$-dynamical system. 

\begin{definition} Set 
	\[
		\widehat {\mathfrak W} (I) \doteq {\mathfrak W} \bigl( \widehat {\mathfrak k} (I), \widehat {\sigma} \bigr) \; , 
		\qquad I \subseteq S^1 \; . 
	\]
Let $\Lambda \mapsto \widehat {\mathfrak u}  (\Lambda)$ be the  representation of $O(1,2)$ 
on~$\widehat {\mathfrak k} (S^1 )$; see Proposition~\ref{nocheinlabel}. Define a group of automorphisms 
$\widehat \alpha^\circ \colon \Lambda \mapsto \widehat \alpha_\Lambda^\circ$ acting on 
$\widehat {\mathfrak W} (S^1 )$
by
	\[
		\widehat\alpha_\Lambda^\circ (\widehat {\rm W} (\widehat f ))
		=\widehat {\rm W} \big(\widehat {\mathfrak u}(\Lambda) \widehat f \, \big) \; , 
		\qquad \widehat f \in \widehat {\mathfrak k} (S^1 ) \; , \qquad \Lambda \in O(1,2) \; .  
	\]
The pair $\bigl(\widehat {\mathfrak W} (S^1 ), \widehat \alpha^\circ \bigr)$ is the 
{\em canonical quantum dynamical system} associated to the Klein--Gordon equation on the de Sitter 
space.
\label{Wcqds-page}
\end{definition}

\begin{proposition} Let $I \subseteq S^1$. Then 
	\[
		\widehat\alpha_\Lambda^\circ \bigl( \, \widehat {\mathfrak W} (I) \bigr) \subset 
		\widehat {\mathfrak W} \bigl( \bigl( \Gamma^{+}(\Lambda I ) \cup \Gamma^{-}(\Lambda I) \bigr) \cap S^1 \bigr) \; . 
	\]
\end{proposition}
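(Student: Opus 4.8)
The plan is to reduce the assertion to its classical counterpart, Proposition~\ref{fsol}, via the functoriality of the Weyl construction. First I would recall that, by definition, $\widehat{\mathfrak W}(I)$ is the $C^*$-subalgebra of $\widehat{\mathfrak W}(S^1)$ generated by the Weyl unitaries $\widehat{\rm W}(\widehat f)$ with $\widehat f\in\widehat{\mathfrak k}(I)$, the embedding $\widehat{\mathfrak W}(I)\subset\widehat{\mathfrak W}(S^1)$ being the canonical one induced by the symplectic subspace inclusion $\bigl(\widehat{\mathfrak k}(I),\widehat\sigma\bigr)\subset\bigl(\widehat{\mathfrak k}(S^1),\widehat\sigma\bigr)$; see~\cite{BR}. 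The same applies with $I$ replaced by any open $J\subset S^1$, in particular $J\doteq\bigl(\Gamma^{+}(\Lambda I)\cup\Gamma^{-}(\Lambda I)\bigr)\cap S^1$: the algebra $\widehat{\mathfrak W}(J)$ is a norm-closed $*$-subalgebra of $\widehat{\mathfrak W}(S^1)$.

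Next I would use that $\widehat\alpha_\Lambda^\circ$ is a $*$-automorphism of $\widehat{\mathfrak W}(S^1)$, hence in particular isometric with closed range, so that it carries the $C^*$-subalgebra generated by a set $S$ onto the $C^*$-subalgebra generated by $\widehat\alpha_\Lambda^\circ(S)$. Applying this with $S=\{\widehat{\rm W}(\widehat f)\mid\widehat f\in\widehat{\mathfrak k}(I)\}$ and recalling $\widehat\alpha_\Lambda^\circ\bigl(\widehat{\rm W}(\widehat f)\bigr)=\widehat{\rm W}\bigl(\widehat{\mathfrak u}(\Lambda)\widehat f\bigr)$, one sees that $\widehat\alpha_\Lambda^\circ\bigl(\widehat{\mathfrak W}(I)\bigr)$ is exactly the $C^*$-subalgebra of $\widehat{\mathfrak W}(S^1)$ generated by $\bigl\{\widehat{\rm W}\bigl(\widehat{\mathfrak u}(\Lambda)\widehat f\bigr)\mid\widehat f\in\widehat{\mathfrak k}(I)\bigr\}$.

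Then I would invoke Proposition~\ref{fsol}: for every $\widehat f\in\widehat{\mathfrak k}(I)$ one has $\widehat{\mathfrak u}(\Lambda)\widehat f\in\widehat{\mathfrak k}(J)$ with $J$ as above. Hence each generator $\widehat{\rm W}\bigl(\widehat{\mathfrak u}(\Lambda)\widehat f\bigr)$ already lies in $\widehat{\mathfrak W}(J)$. Since, by the previous step, $\widehat\alpha_\Lambda^\circ\bigl(\widehat{\mathfrak W}(I)\bigr)$ is the smallest norm-closed $*$-subalgebra of $\widehat{\mathfrak W}(S^1)$ containing all these generators, and $\widehat{\mathfrak W}(J)$ is such a subalgebra, the inclusion $\widehat\alpha_\Lambda^\circ\bigl(\widehat{\mathfrak W}(I)\bigr)\subset\widehat{\mathfrak W}(J)$ follows immediately.

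As for where the real content lies: the statement is a pure ``second quantization'' of the finite speed of propagation for the classical canonical system, so no new analytic estimate is required — the geometric input is entirely absorbed into Proposition~\ref{fsol} (which in turn rests on Theorem~\ref{cauchyproblem} and the support properties of the Klein--Gordon Cauchy problem). The only points deserving explicit mention are the two structural facts about the Weyl functor used above, namely that symplectic subspace inclusions yield canonical $C^*$-subalgebra inclusions and that $*$-automorphisms transport generated subalgebras; these are standard but worth stating precisely, since $\widehat\alpha_\Lambda^\circ$ is \emph{not} strongly continuous in the $C^*$-norm and one should not argue by approximation in the group parameter. If one prefers to bypass the abstract functoriality, an equivalent route is to check directly that any finite product $\widehat{\rm W}\bigl(\widehat{\mathfrak u}(\Lambda)\widehat f_1\bigr)\cdots\widehat{\rm W}\bigl(\widehat{\mathfrak u}(\Lambda)\widehat f_n\bigr)$ with $\widehat f_j\in\widehat{\mathfrak k}(I)$ lies in $\widehat{\mathfrak W}(J)$ — again immediate from Proposition~\ref{fsol} — and pass to norm limits.
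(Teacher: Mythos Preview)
Your proposal is correct and follows exactly the same approach as the paper, which simply states ``This is a direct consequence of Proposition~\ref{fsol}.'' You have merely unpacked the functorial step (symplectic subspace inclusions yield $C^*$-subalgebra inclusions, and Bogoliubov automorphisms transport generators) that the paper leaves implicit.
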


\begin{proof} This is a direct consequence of Proposition \ref{fsol}.
\end{proof}

\goodbreak
\section{De Sitter vacuum states}
\label{Sec-6-1}

Let  $ \alpha \colon \Lambda \mapsto \alpha_\Lambda$ be a representation (in terms of automorphisms) 
of $SO_0(1,2)$ on the $C^*$-algebra ${\mathfrak W}(dS)$. 

\begin{definition}
\label{vs}
A normalised positive linear functional $\omega$  is called a {\em de Sitter 
vacuum state} for the quantum dynamical system 
$\left({\mathfrak W} (dS) , \alpha  \right) $, if  
\begin{itemize}
\item[$ i.)$] $\omega$ is invariant under the action of the proper, orthochronous 
Lorentz group $SO_0(1,2)$, \emph{i.e.}, 
	\[
		\omega = \omega \circ \alpha_\Lambda  \qquad \forall \Lambda \in  SO_0(1,2)\; ;
	\]
\item [$ ii.)$] $\omega$ satisfies the {\em geodesic KMS condition\/}: for every wedge $W = \Lambda W_1$,  
$\Lambda \in  SO_0(1,2)$, the restricted (partial) state $\omega_{\upharpoonright {\mathfrak W}(W)}$ satisfies 
the KMS-condition at inverse temperature $2 \pi r$ with respect to the one-parameter  group 
	\[
		t \mapsto \Lambda_{W}\bigl( \tfrac{t}{r} \bigr)
	\]
of boosts, which leaves the wedge $W$ invariant. In other words:  
for each pair  
$[f],  [g] \in {\mathfrak k}(W)$ 
there exists a  
function~${\mathfrak F}_{f,g}(\tau)$ holomorphic in the strip 
	\[
	{\mathbb S}_{2 \pi r} =\{ \tau\in {\mathbb C} \mid 0  < \Im \tau <  2\pi r \}
	\] 
and continuous on~$\overline{{\mathbb S}_{2 \pi r}  } $ such that
	\begin{align*}
		{\mathfrak F}_{f,g}(t)&= \omega_{\upharpoonright {\mathfrak W}( W )} 
		\bigl({\rm W}([f])\alpha_{\Lambda_W (\frac{t}{r})} ({\rm W}([g])) \bigr) 
			\nonumber \\
		{\mathfrak F}_{f, g}(t + i 2\pi r)&=
		\omega_{\upharpoonright {\mathfrak W}( W)}
		\bigl(\alpha_{\Lambda_W (\frac{t}{r})} ({\rm W}([g])){\rm W}([f]) \bigr)   \qquad \forall  t\in {\mathbb R} \; .
	\end{align*}
\end{itemize}
\end{definition}

\begin{remark} It is sufficient to verify the geodesic KMS condition for 
{\em one} wedge, as the invariance property $i.)$ then implies that it 
holds for any wedge. 
\end{remark}

\bigskip
The {\em de Sitter vacuum state} for the free field is presented next.

\begin{theorem}
\label{th2.5} The state $\omega^\circ$ on ${\mathfrak W}( {dS})$ given by 
\label{freevacuumstatepage}
	\[
		\label{statedef}
		\omega^\circ({\rm W}([f]))
		= {\rm e}^{-\frac{1}{2} \| [f] \|_{ {\mathfrak h} (dS) }}, \qquad  f \in {\mathcal D}_{\mathbb{R}} ({dS}) \; ,
	\]
is the {\em unique} de Sitter vacuum state for the pair~$\bigl({\mathfrak W}( dS ), \alpha^\circ \bigr)$. 
Moreover, the GNS representation $\pi^\circ$ associated to the pair 
$\bigl({\mathfrak W}( dS) ,\omega^\circ \bigr) $
is (unitarily equivalent to) the  Fock representation (see Appendix \ref{Fockspace}) over the one-particle space ${\mathfrak h} (dS)$, \emph{i.e.}, 
	\[
 		\pi^\circ ({\rm W}( [f] )) 
		= {\rm W}_F ( [f]) \; , \qquad {\mathcal H} (dS) 
		= \Gamma ({\mathfrak h}(dS)) \; .
 	\]
Note that $\ker \mathbb{P} = \ker {\mathcal F}_{+ \upharpoonright \nu}$, thus $\omega^\circ$ is well-defined. 
\end{theorem}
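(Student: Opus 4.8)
\textbf{Proof plan for Theorem~\ref{th2.5}.}
The plan is to split the statement into three tasks: (a) that $\omega^\circ$ is a well-defined state and is a de Sitter vacuum state, (b) that it is the \emph{unique} such state, and (c) the identification of the GNS representation with the Fock representation over ${\mathfrak h}(dS)$. Task (a): the functional $\omega^\circ$ is well-defined on ${\mathfrak W}(dS)$ because $\ker \mathbb{P} = \ker {\mathcal F}_{+\upharpoonright\nu}$ (the lemma preceding Proposition~\ref{propo2.5}), so $\|[f]\|_{{\mathfrak h}(dS)}$ depends only on the class $[f]\in{\mathfrak k}(dS)$; that the Gaussian functional $W([f])\mapsto {\rm e}^{-\frac14\|[f]\|^2_{{\mathfrak h}(dS)}}$ is a state on the Weyl algebra is the standard quasi-free construction (see \cite{BR}), since $\Im\langle\,\cdot\,,\,\cdot\,\rangle_{{\mathfrak h}(dS)}$ is the symplectic form $\sigma$ by \eqref{comfkt}. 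Invariance under $SO_0(1,2)$ follows from Proposition~\ref{UIR-FH}: $u(\Lambda)$ is unitary on ${\mathfrak h}(dS)$ and $u(\Lambda)K = K\circ{\mathfrak u}(\Lambda)$, hence $\|{\mathfrak u}(\Lambda)[f]\|_{{\mathfrak h}(dS)} = \|[f]\|_{{\mathfrak h}(dS)}$. The geodesic KMS condition is exactly the content of Theorem~\ref{1PStrucHe}~$iii.)$ lifted to the Weyl algebra: for $[f],[g]\in{\mathfrak k}(W_1)$ one builds the holomorphic function ${\mathfrak F}_{f,g}(\tau)$ out of the two-point function $\langle[f], u(\Lambda_{W_1}(\tau/r))[g]\rangle_{{\mathfrak h}(dS)}$, which by the proof of Theorem~\ref{1PStrucHe} extends analytically to ${\mathbb S}_{2\pi r}$ with boundary value at $\Im\tau = 2\pi r$ given by $\langle[f], u(\Theta_{W_1})[g]\rangle_{{\mathfrak h}(dS)}$; exponentiating (using the quasi-free form of $\omega^\circ$ and the Weyl relations) turns this one-particle KMS property into the KMS property of $\omega^\circ_{\upharpoonright{\mathfrak W}(W_1)}$ with respect to $t\mapsto\alpha_{\Lambda_{W_1}(t/r)}$, and invariance then carries it to every wedge.

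Task (c), the GNS/Fock identification, is essentially automatic once (a) is in place: a quasi-free state on a Weyl algebra whose covariance is the scalar product of a complex Hilbert space ${\mathfrak h}$ has GNS triple $(\Gamma({\mathfrak h}), \pi_F, \Omega_F)$ with $\pi_F(W([f])) = {\rm W}_F([f])$ the Fock-space Weyl operators and $\Omega_F$ the Fock vacuum (see Appendix~\ref{Fockspace}). Here ${\mathfrak h}(dS)$ carries the intrinsic complex structure ${\mathscr I}$ constructed via the Riesz lemma in Section~\ref{new-3.4}, so it genuinely is a complex Hilbert space, and the one-particle structure $(K,{\mathfrak h}(dS),u)$ of Theorem~\ref{1PStrucHe} is precisely the data needed; the unitarity of $u$ makes the second quantization $\Gamma(u(\Lambda))$ implement $\alpha^\circ_\Lambda$ with $\Omega_F$ invariant.

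Task (b), uniqueness, is the main obstacle and requires a genuine argument. The strategy is the standard one for KMS-rigidity of the de Sitter vacuum (going back to the analysis of Figari--H\o egh-Krohn--Nappi and the Bisognano--Wichmann setup): let $\omega$ be any de Sitter vacuum state; by Lorentz invariance it is determined by its restriction to one wedge ${\mathfrak W}(W_1)$ together with the global invariance. The geodesic KMS condition at inverse temperature $2\pi r$ pins down $\omega_{\upharpoonright{\mathfrak W}(W_1)}$ as the unique KMS state for the boost automorphism group $t\mapsto\alpha_{\Lambda_{W_1}(t/r)}$ \emph{provided} that dynamical system has a unique KMS state at that temperature; since the one-particle boost generator (unitarily equivalent to \eqref{boostgenerator}) has purely absolutely continuous spectrum filling $\mathbb{R}$ with no kernel — established in Lemma~\ref{Lm3.5} and used again in Proposition~\ref{restrictedOnePartdS} — the restricted system is a quasi-free system with an ergodic (in fact asymptotically abelian) boost action, and for such systems the $\beta$-KMS state is unique. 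One then argues that the collection of wedge restrictions, glued by covariance and the fact that wedges generate ${\mathfrak W}(dS)$ (together with $\omega$ being a state, hence continuous, and the two-point function of a wedge restriction determining, via analyticity in the Lorentz group parameter à la Proposition~\ref{prop:4.1b}, the global two-point function), forces $\omega = \omega^\circ$ on all of ${\mathfrak W}(dS)$; positivity plus the quasi-free structure then upgrades equality of two-point functions to equality of states. The delicate point to get right is the passage from ``unique KMS state on each ${\mathfrak W}(W)$'' to ``unique state on ${\mathfrak W}(dS)$'' — one must invoke either the analytic continuation of the two-point function off the wedge (Proposition~\ref{prop:4.1b}, which gives that the $SO_0(1,2)$-invariant two-point function analytic in ${\mathcal T}_+\times{\mathcal T}_-$ is unique up to normalization, the normalization being fixed by the KMS temperature) or an additivity/Reeh--Schlieder argument (Theorem~\ref{oprs}). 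I would phrase the uniqueness proof around Proposition~\ref{prop:4.1b}: any de Sitter vacuum state is quasi-free with a two-point function that is Lorentz invariant, satisfies the KG equation, and obeys the $2\pi r$-KMS (equivalently, the analyticity/tuboid) condition, and Bros--Moschella's rigidity result identifies that two-point function with ${\mathcal W}^{(2)}$ of \eqref{tpf-1}.
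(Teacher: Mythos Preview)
Your treatment of tasks (a) and (c) matches the paper's proof essentially line for line: invariance from Proposition~\ref{UIR-FH}, the one-particle KMS property from Theorem~\ref{1PStrucHe}~$iii.)$ exponentiated through the Weyl relations, and the Fock identification from the standard quasi-free GNS construction in Appendix~\ref{Fockspace}. (A minor slip: with the paper's normalisation $\Phi(f)=\tfrac{1}{\sqrt2}(a^*(f)+a(f))^{-}$ the vacuum expectation is $e^{-\frac12\|f\|^2}$, not $e^{-\frac14\|f\|^2}$.)

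For uniqueness the paper takes a shorter route than you do. It first invokes \cite[Vol.~II, p.~49]{BR} to conclude that the KMS condition on the wedge Weyl algebra forces the restricted state to be quasi-free (so the $n$-point functions are determined by the two-point function); then it appeals to Kay's uniqueness theorem for one-particle $\beta$-KMS structures (Proposition~\ref{Kay Th}, used via Proposition~\ref{restrictedOnePartdS}~$iv.)$), whose hypothesis --- that zero is not an eigenvalue of the boost generator $\varepsilon_{\upharpoonright I_+}$ --- is Lemma~\ref{Lm3.5}. That fixes the two-point function on the wedge, and Lorentz invariance (any pair of causally related points lies in some wedge) fixes it globally. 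Your route through ergodicity/asymptotic abelianness of the boost action is not the right lever here: uniqueness of KMS states on CCR algebras does not follow from ergodicity alone but precisely from the combination ``KMS $\Rightarrow$ quasi-free'' plus the one-particle uniqueness, which is exactly Kay's theorem. Your final paragraph lands on the correct ingredients, but the invocation of a ``Bros--Moschella rigidity result'' for the two-point function is a detour --- Proposition~\ref{prop:4.1b} gives analyticity, not a classification --- whereas Kay's theorem already delivers uniqueness at the one-particle level without needing to characterise all invariant analytic two-point functions. Note also that the step ``any de Sitter vacuum state is quasi-free'' is not automatic and needs the \cite{BR} argument; you state it at the end but do not justify it.
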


\begin{proof} Recall the commutative diagram in Proposition \ref{Prop5.7}.
By construction,  $\omega^\circ$ is invariant under Lorentz transformations: for $f \in {\mathcal D}_{\mathbb{R}} (dS)$, we have
	\begin{align*}
		\label{invariantstate}
		\omega^\circ \bigl(\alpha_\Lambda ({\rm W}( \mathbb{P} f))\bigr) 
			&= {\rm e}^{-\frac{1}{2}\| [ \Lambda_*  f ] \|_{{\mathfrak h} (dS)}^2}
			= {\rm e}^{-\frac{1}{2}\| u (\Lambda)  [f] \|_{{\mathfrak h} (dS)}^2}\\
			&= {\rm e}^{-\frac{1}{2}\|  [f] \|_{{\mathfrak h} (dS)}^2}  \quad \qquad \qquad
			\qquad \qquad   \forall \Lambda \in  SO_0 (1, 2) \; .
	\end{align*}
The geodesic KMS condition follows from Proposition \ref{1PStrucHe} $iii.)$: for $f, g \in {\mathcal D}_{\mathbb{R}}(W)$ we find
\begin{align*}
& \omega^\circ \bigl({\rm W}([f])\alpha_{\Lambda_W (t)} ({\rm W}([g])) \bigr)
\\
& \qquad = {\rm e}^{- i \sigma ( [f], \Lambda_W (\frac{t}{r})_* [g] ) /2} \omega 
		\bigl({\rm W}([f]+ \Lambda_W (\tfrac{t}{r})_* [g]) \bigr) \\
		& \qquad =  {\rm e}^{- i \Im \langle [f], u (\Lambda_W (\frac{t}{r})) [g] \rangle_{{\mathfrak h}(dS)} } 
		{\rm e}^{-\frac{1}{2}\| [f] + u(\Lambda_W (\frac{t}{r}))[g] \|_{{\mathfrak h} (dS)}^2} \\
		& \qquad =  {\rm e}^{- i \langle [f], u (\Lambda_W (\frac{t}{r})) [g] \rangle_{{\mathfrak h}(dS)} } 
		{\rm e}^{-\frac{1}{2}\| [f]  \|_{{\mathfrak h} (dS)}^2-\frac{1}{2}\| [g] \|_{{\mathfrak h} (dS)}^2} \; . 
\end{align*}  
Moreover, Proposition \ref{1PStrucHe} $iii.)$ implies that for $f, g \in {\mathcal D}_{\mathbb{R}}(W)$ 
\begin{align*}
\langle u (\Lambda_W ( i \pi )) [f], u (\Lambda_W ( i \pi )) [g] \rangle_{{\mathfrak h}(dS)} 
& = \langle u (\Theta_W ) [f], u (\Theta_W ) [g] \rangle_{{\mathfrak h}(dS)}
\\
& = \langle [g], [f] \rangle_{{\mathfrak h}(dS)} \; . 
\end{align*}  
Together these two identities imply 
\[
\omega^\circ \bigl({\rm W}([f])\alpha_{\Lambda_W (\frac{t}{r} + i 2 \pi)} ({\rm W}([g])) \bigr)
= \omega^\circ \bigl(\alpha_{\Lambda_W (\frac{t}{r})} ({\rm W}([g])) {\rm W}([f]) \bigr)\; . 
\]  
Uniqueness of the geodesic KMS state follows by expressing the $n$-point functions in terms of two-point functions, 
using the KMS condition for the wedge (see \cite[Vol.~II, p.~49] {BR}).  Uniqueness of the two-point function
in the wedge follows from Proposition \eqref{restrictedOnePartdS} $iv.)$, \emph{i.e.}, the fact that zero is not an eigenvalue
of $\varepsilon_{\upharpoonright I_+} \ge 0$.

\goodbreak
The GNS representation $({\mathcal H} (dS), \pi^\circ, \Omega)$ is characterised, up to unitary equivalence, 
by the following two properties:
\begin{itemize}
\item [$ i.)$] 
$(\Omega, \pi^\circ ({\rm W}([f])) \Omega) 
= \omega^\circ \bigl( {\rm W}([f])\bigr)$ 
		for all ${\rm W}([f]) \in {\mathfrak W}( dS)$; 
\item [$ ii.)$] the vector $\Omega \in {\mathcal H} (dS)$ is cyclic for 
$\pi^\circ \bigl( {\mathfrak W}( dS )\bigr)''$. 
\end{itemize}
A short inspection of the Fock representation (see Appendix \ref{Fockspace}) 
verifies that the properties $i.)$ and $ii.)$ hold. 
\end{proof}

\emph{Notation}.
Denote the generators implementing the automorphisms corresponding to the subgroups 
$t \mapsto \Lambda_1(t)$, $s \mapsto\Lambda_2(s)$ and $ \alpha \mapsto R_0(\alpha)$ in the GNS 
representation~$\pi^\circ$ associated to a de Sitter vacuum state by $L_1$, $L_2$ and~$K_0$.

\label{AO-page}
\begin{definition}
The local von Neumann algebras for the free covariant field are defined by setting
	\[
		{\mathscr A}_\circ ({\mathcal O}) \doteq  \pi^\circ \bigl({\mathfrak W}
		({\mathcal O}) \bigr)'' \; ,  \qquad {\mathcal O} \subset dS \; . 
	\] 
It follows from Theorem \ref{th2.5} that the algebra ${\mathscr A} ({\mathcal O}) $ is equal to the von Neumann algebra generated 
by ${\rm W}_{F} (f )$, $f \in {\mathfrak h}({\mathcal O})$.
\end{definition}

\begin{definition}[Extension to the weak closure]
\label{weak-Ext}
The (anti-) unitary operators 
\[ U  (\Lambda) \doteq \Gamma (u(\Lambda))\; , \qquad \Lambda \in O(1,2) \; , \]
implement the free dynamics in the Fock space ${\mathcal H} (dS)$: for $f \in {\mathcal D}_{\mathbb{R}}(dS)$ we have
	\begin{align*}
		\pi^\circ \bigl( \alpha_\Lambda^\circ ( {\rm W}( [f] ) ) \bigr) 
		&= U  (\Lambda) {\rm W}_{F} \bigl( [f]) \bigr) U  (\Lambda)^{-1} \; , 
		\qquad \Lambda \in O(1,2) \; . 
	\end{align*}	
The right hand side extends to arbitrary elements $h \in {\mathfrak h}(dS)$, 
and, in the sequel, to arbitrary 
elements in the weak closure $\pi^\circ \bigl({\mathfrak W} (dS) \bigr)''$ of ${\mathfrak W} (dS)$. 
We denote this extension of the automorphism 
$\alpha_\Lambda^\circ$ by the same letter, \emph{i.e.}, for $h \in {\mathfrak h}(dS)$,
 	\begin{equation}
	\label{aut-ext}
		\alpha_\Lambda^\circ ( {\rm W}_F( h) ) \bigr) 
		\doteq U  (\Lambda) {\rm W}_{F} ( h)  U  (\Lambda)^{-1} \; , 
		\qquad \Lambda \in O(1,2) \; . 
	\end{equation}
Similarly, the GNS vector can be used to extend the free de Sitter vacuum state $\omega^\circ$ 
to the weak closure $\pi^\circ \bigl({\mathfrak W} (dS) \bigr)''$:
	\begin{equation}
	\label{state-ext}
	\omega^\circ (A) = ( \Omega , A \Omega ) \; , 
	\qquad A \in \pi^\circ \bigl({\mathfrak W} (dS) \bigr)'' \; . 
	\end{equation}
Here $\Omega$ denotes the vacuum vector in the Fock space ${\mathcal H}(dS)$. 
\end{definition}

\begin{proposition} The state \eqref{state-ext} satisfies the geodesic KMS condition 
with respect to the automorphisms \eqref{aut-ext}. 
\end{proposition}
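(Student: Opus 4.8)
The assertion to be proved is the von Neumann algebra version of the geodesic KMS condition: writing $\mathscr{A}_\circ(W) = \pi^\circ(\mathfrak{W}(W))''$, one must exhibit for every wedge $W = \Lambda W_1$ and every pair $A, B \in \mathscr{A}_\circ(W)$ a function holomorphic on the strip $\mathbb{S}_{2\pi r}$, continuous on its closure, with boundary values $t \mapsto \omega^\circ\!\bigl(A\,\alpha_{\Lambda_W(t/r)}^\circ(B)\bigr)$ and $t \mapsto \omega^\circ\!\bigl(\alpha_{\Lambda_W(t/r)}^\circ(B)\,A\bigr)$, where $\omega^\circ$ and $\alpha^\circ$ are the extensions \eqref{state-ext} and \eqref{aut-ext}. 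First I would reduce to $W = W_1$: by Lorentz invariance of $\omega^\circ$ and covariance (Proposition \ref{Prop4-10}, equation \eqref{eqTt}), conjugation by the unitary $U(\Lambda) = \Gamma(u(\Lambda))$ fixes $\Omega$, carries $\mathscr{A}_\circ(W_1)$ onto $\mathscr{A}_\circ(W)$, and intertwines $\alpha_{\Lambda_{W_1}(t/r)}^\circ$ with $\alpha_{\Lambda_W(t/r)}^\circ$, so the general case follows from the case $W_1$. I stress that one cannot simply invoke the standard $C^*$-algebraic KMS-extension theorem (\cite{BR}, Vol.~II) here, since $\alpha^\circ$ is not norm-continuous; the argument must instead run through Tomita--Takesaki theory.

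Next I would verify that $\Omega$ is cyclic and separating for $\mathscr{A}_\circ(W_1)$. Cyclicity is the second-quantised Reeh--Schlieder property: Theorem \ref{oprs} gives that $\mathfrak{h}(W_1) + i\,\mathfrak{h}(W_1)$ is dense in $\mathfrak{h}(dS)$, and applying the functor $\Gamma$, together with $\mathscr{A}_\circ(W_1) = \{{\rm W}_F(h) \mid h \in \mathfrak{h}(W_1)\}''$, shows $\Omega$ is cyclic. Since $W_1'$ is space-like to $W_1$, Lemma \ref{Lm4-9} yields $\sigma([f],[g]) = 0$ for $f \in \mathcal{D}_{\mathbb{R}}(W_1)$, $g \in \mathcal{D}_{\mathbb{R}}(W_1')$, so the corresponding Weyl operators commute and $\mathscr{A}_\circ(W_1) \subset \mathscr{A}_\circ(W_1')'$; as $\Omega$ is also cyclic for $\mathscr{A}_\circ(W_1')$, it is separating for $\mathscr{A}_\circ(W_1)$. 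Hence $(\mathscr{A}_\circ(W_1),\Omega)$ has well-defined modular objects $\Delta_{W_1}, J_{W_1}$, and by the fundamental theorem of modular theory the vector state $(\Omega,\,\cdot\,\Omega)$ satisfies the KMS condition at inverse temperature $1$ with respect to $t \mapsto \mathrm{Ad}\,\Delta_{W_1}^{it}$.

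The crucial step is then to identify $\Delta_{W_1}^{it}$ with the boost. The Tomita operator $S_{W_1}$ of $(\mathscr{A}_\circ(W_1),\Omega)$ is the second quantisation of the closed one-particle Tomita operator $\overline{s}_{W_1}$ — the functoriality of Tomita--Takesaki theory under $\Gamma$ for von Neumann algebras generated by Weyl operators over a standard subspace, cf.\ \cite{EO, LRT} — so $\Delta_{W_1} = \Gamma(\delta_{W_1})$ and $J_{W_1} = \Gamma(j_{W_1})$, where $\overline{s}_{W_1} = j_{W_1}\,\delta_{W_1}^{1/2}$. The one-particle Bisognano--Wichmann theorem (Theorem \ref{one-p-BW}, equation \eqref{eqSPolarDecomp}) gives $s_{W_1} = u(TP_1)\,u(\Lambda_1(i\pi))$; comparing polar decompositions and using \eqref{eqBooW} one reads off $j_{W_1} = u(\Theta_{W_1})$ (with $\Theta_{W_1} = P_1T$) and $\delta_{W_1}^{it} = u(\Lambda_1(-2\pi t))$. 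With the proper-time reparametrisation $\Lambda_{W_1}(\tau/r) = \Lambda_1(\tau/r)$, applying $\Gamma$ yields
\[
	U\bigl(\Lambda_{W_1}(\tau/r)\bigr) = \Gamma\bigl(u(\Lambda_1(\tau/r))\bigr) = \Delta_{W_1}^{-i\tau/(2\pi r)} \;, \qquad \tau \in \mathbb{R}\;,
\]
so that $\alpha_{\Lambda_{W_1}(\tau/r)}^\circ = \mathrm{Ad}\,\Delta_{W_1}^{-i\tau/(2\pi r)}$ on $\mathscr{A}_\circ(W_1)$. Substituting this into the modular KMS condition converts the $\beta = 1$ condition for $\mathrm{Ad}\,\Delta_{W_1}^{it}$ into the $\beta = 2\pi r$ geodesic KMS condition for $t \mapsto \alpha_{\Lambda_{W_1}(t/r)}^\circ$, extending the functions $\mathfrak{F}_{f,g}$ of Theorem \ref{th2.5} from Weyl operators to all of $\mathscr{A}_\circ(W_1)$; combined with the reduction above this gives the claim for every wedge.

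I expect the main obstacle to be bookkeeping with conventions: one must track the three interlocking normalisations — the analyticity strip of $\Lambda_1(t+i\pi)$ in \eqref{eqBooW} and \eqref{eqSPolarDecomp}, the sign and scaling of the modular group $\Delta^{it}$, and the $\beta = 2\pi r$ proper-time parametrisation of the boost — so that the two boundary values land on the correct edges of $\mathbb{S}_{2\pi r}$ as required in Definition \ref{vs}. (The orientation of the modular flow is harmless, since the $\beta$-KMS condition for $\sigma_t$ is equivalent to the $(-\beta)$-KMS condition for $\sigma_{-t}$.) A secondary point needing a line of justification is the $\sigma$-weak continuity of $t \mapsto \alpha_{\Lambda_W(t/r)}^\circ$ on $\pi^\circ(\mathfrak{W}(dS))''$, which follows from strong continuity of $t \mapsto u(\Lambda_W(t/r))$ on the one-particle space, hence of its second quantisation $\Gamma(u(\Lambda_W(t/r)))$.
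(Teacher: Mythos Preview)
Your argument is correct, but it takes a substantially different route from the paper's. The paper's proof is two sentences: the geodesic KMS condition on the Weyl algebra $\mathfrak{W}(W)$ is already part of Theorem~\ref{th2.5}, and its extension to the weak closure $\pi^\circ(\mathfrak{W}(W))''$ is a standard result, for which \cite[Corollary~5.3.4]{BR} is cited. Your remark that the $C^*$-algebraic extension theorem does not apply because $\alpha^\circ$ is not norm-continuous is a fair cautionary note, but it does not actually obstruct the paper's argument: in the GNS representation the dynamics is implemented by the strongly continuous unitary group $t\mapsto\Gamma\bigl(u(\Lambda_W(t/r))\bigr)$, so one is in the $W^*$-dynamical setting, and the extension of the KMS property from a $\sigma$-weakly dense $*$-subalgebra to its weak closure is indeed covered by the Bratteli--Robinson machinery.

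Your approach---establishing cyclicity and separability of $\Omega$ for $\mathscr{A}_\circ(W_1)$ via Reeh--Schlieder and locality, second-quantising the one-particle Tomita operator to identify $\Delta_{W_1}^{it}$ with $U\bigl(\Lambda_1(-2\pi t)\bigr)$ through Theorem~\ref{one-p-BW}, and then reading off the $\beta=2\pi r$ KMS condition from the modular KMS property---is logically independent and self-contained. It has the merit of yielding more than the bare statement: you obtain along the way the full Fock-space Bisognano--Wichmann identification $\Delta_{W_1}=\Gamma(\delta_{W_1})$, $J_{W_1}=\Gamma\bigl(u(\Theta_{W_1})\bigr)$, which the paper does not make explicit at this point. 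The paper's route is shorter because it offloads the analytic work to the literature; yours is longer but makes the geometric origin of the KMS temperature $2\pi r$ transparent.
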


\begin{proof} 
The geodesic KMS property for $\left({\mathfrak W} (W) , \alpha_{\Lambda_W}  \right) $
is part of Theorem \ref{th2.5}.
The fact that the KMS property 
extends to the weak closure is a standard result, see, \emph{e.g.}, \cite[Corollary 5.3.4]{BR}. 
\end{proof}

\section{The canonical free field}
\label{sec:freedesittercircle}

As $C^*$-algebras, the Weyl algebras $\widehat {\mathfrak W} (S^1)$ and  ${\mathfrak W} (dS)$ are isomorphic, and 
can be identified using the map (see Proposition \ref{nocheinlabel})
	\[	
		\widehat W ( \widehat f ) \mapsto  W([f]) \; , \qquad f \in {\mathcal D}_{\mathbb{R}}(dS) \;  . 
	\]
Moreover, for $f \in {\mathcal D}_{\mathbb{R}}(dS)$ we have  (see Proposition \ref{Prop5.7})
	\[
		{\rm e}^{-\frac{1}{2} \| \widehat K \widehat f \|_{\widehat {\mathfrak h} (S^1) }}
		= {\rm e}^{-\frac{1}{2} \| [f] \|_{ {\mathfrak h} (dS) } } .
	\]
Consequently, the state
	\begin{equation}
	\label{129a}
	\widehat \omega^\circ  \bigl( \widehat {{\rm W}} (\widehat f ) \bigr) 
	\doteq {\rm e}^{ - \frac{1}{2} \| \widehat K \widehat f \|_{\widehat{\mathfrak h} (S^1) } }\; , \qquad f \in {\mathcal D}_{\mathbb{R}}(dS) \; , 
	\end{equation}
describes the {\em same} (we will clarify exactly in which sense) state as the one given in Theorem \ref{th2.5}.

\begin{theorem}
\label{canonicalfreevacuumstatepage}
The state \eqref{129a} is the unique  normalised positive linear functional on 
$\widehat{\mathfrak W} \bigl( S^1 )$, which satisfies the 
following properties:
\begin{itemize}
\item[$ i.)$] $\widehat \omega^\circ$ is invariant under the action of $SO_0(1,2)$, \emph{i.e.}, 
	\[
		\widehat \omega^\circ = \widehat \omega^\circ \circ
		\widehat{\alpha}^\circ_\Lambda  \qquad \forall \Lambda \in  SO_0(1,2)\; ;
	\]
\item [$ ii.)$] $\widehat \omega^\circ$ satisfies the {\em geodesic KMS condition\/}: for every half-circle 
$I_\alpha$  the restricted (partial) state 	
	\[ 
	\widehat \omega_{\upharpoonright \widehat{\mathfrak W}( I_\alpha)}^\circ
	\]
satisfies the KMS-condition at inverse temperature $2 \pi r$ with respect to the one-parameter  group 
$t \mapsto \Lambda^{(\alpha)}(\frac{t}{r})$ of boosts. 
\end{itemize}
\end{theorem}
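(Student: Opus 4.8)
The plan is to deduce the statement from the already-established uniqueness of the de Sitter vacuum state $\omega^\circ$ on ${\mathfrak W}(dS)$ (Theorem~\ref{th2.5}), transported through the $C^*$-isomorphism $\Phi\colon\widehat{\mathfrak W}(S^1)\to{\mathfrak W}(dS)$, $\widehat{\rm W}(\widehat f)\mapsto{\rm W}([f])$ for $f\in{\mathcal D}_{\mathbb R}(dS)$, introduced before~\eqref{129a}. First I would observe that $\widehat\omega^\circ=\omega^\circ\circ\Phi$: by $\|\widehat K\widehat f\|_{\widehat{\mathfrak h}(S^1)}=\|[f]\|_{{\mathfrak h}(dS)}$ (Proposition~\ref{Prop5.7}) one has $\widehat\omega^\circ(\widehat{\rm W}(\widehat f))={\rm e}^{-\frac12\|\widehat K\widehat f\|_{\widehat{\mathfrak h}(S^1)}}={\rm e}^{-\frac12\|[f]\|_{{\mathfrak h}(dS)}}=\omega^\circ({\rm W}([f]))$ for all $f\in{\mathcal D}_{\mathbb R}(dS)$, and such Weyl elements generate $\widehat{\mathfrak W}(S^1)$; in particular $\widehat\omega^\circ$ is a well-defined quasi-free state, whose GNS triple is the Fock representation over $\widehat{\mathfrak h}(S^1)$.

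The core is a dictionary between the two sides. Since ${\mathbb T}$ intertwines $\widehat{\mathfrak u}(\Lambda)$ with ${\mathfrak u}(\Lambda)$ (Propositions~\ref{nocheinlabel} and~\ref{Prop4-10}), the isomorphism $\Phi$ intertwines the automorphism groups, $\Phi\circ\widehat\alpha^\circ_\Lambda=\alpha^\circ_\Lambda\circ\Phi$; and $\Phi$ carries the canonical local algebra $\widehat{\mathfrak W}(I_\alpha)$ onto the covariant wedge algebra ${\mathfrak W}(W^{(\alpha)})$, where $W^{(\alpha)}=I_\alpha''=R_0(\alpha)W_1$ is the causal completion of the half-circle $I_\alpha$ --- this is the one geometric input, following from Corollary~\ref{tztf} (together with the sharp-time description of $\widehat{\mathfrak k}(I_\alpha)$ in the remark after Theorem~\ref{st-kappa}) and the fact that the causal completion of a half-circle of length $\pi r$ is a wedge. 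Finally $\Lambda^{(\alpha)}(t/r)=\Lambda_{W^{(\alpha)}}(t/r)$ by definition. Putting these together: a state $\widehat\omega$ on $\widehat{\mathfrak W}(S^1)$ satisfies~(i) and~(ii) if and only if $\omega\doteq\widehat\omega\circ\Phi^{-1}$ is $SO_0(1,2)$-invariant and satisfies the geodesic KMS condition at $2\pi r$ for the wedge $W^{(\alpha)}$ for every $\alpha$. By the remark following Definition~\ref{vs}, invariance together with the geodesic KMS condition for the single wedge $W_1=W^{(0)}$ already forces it for all wedges, so $\omega$ is a de Sitter vacuum state in the sense of Definition~\ref{vs}; hence $\omega=\omega^\circ$ by Theorem~\ref{th2.5}, i.e. $\widehat\omega=\omega^\circ\circ\Phi=\widehat\omega^\circ$. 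Applying the same dictionary to $\widehat\omega^\circ=\omega^\circ\circ\Phi$ itself, and using that $\omega^\circ$ verifies the hypotheses of Definition~\ref{vs}, yields existence, so~(i) and~(ii) characterise $\widehat\omega^\circ$ uniquely.

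The main obstacle is the algebra identification $\Phi(\widehat{\mathfrak W}(I_\alpha))={\mathfrak W}(W^{(\alpha)})$, i.e. that under ${\mathbb T}^{-1}$ the Cauchy-data space $\widehat{\mathfrak k}(I_\alpha)$ corresponds exactly to ${\mathfrak k}({\mathcal O}_{I_\alpha})$: the inclusion ${\mathbb T}\,{\mathfrak k}({\mathcal O}_{I_\alpha})\subseteq\widehat{\mathfrak k}(I_\alpha)$ uses Lemma~\ref{Lm3.8} and Proposition~\ref{fsol} (the Cauchy data of a solution generated by a test function in ${\mathcal O}_{I_\alpha}$ are supported in $I_\alpha$, by causal completeness of $I_\alpha$ inside $S^1$) together with Corollary~\ref{tztf}~$iii.)$,$iv.)$, while the reverse inclusion uses Theorem~\ref{solutions}~$i.)$ to realise the solution with prescribed Cauchy data by a test function supported in an arbitrarily thin neighbourhood of $I_\alpha$ inside ${\mathcal O}_{I_\alpha}$. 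A technically self-contained alternative, bypassing this identification, is to check~(i) and~(ii) for $\widehat\omega^\circ$ directly as in the proof of Theorem~\ref{th2.5}: invariance from $\widehat{u}(\Lambda)$ being unitary and intertwining $\widehat K$ with $\widehat{\mathfrak u}(\Lambda)$; the geodesic KMS condition reduced by invariance to the half-circle $I_+$ and the boost $\widehat{u}(\Lambda_1(t/r))$, then obtained by expanding the Weyl relations and analytically continuing the two-point function $t\mapsto\langle\widehat K\widehat f,\widehat{u}(\Lambda_1(t/r))\widehat K\widehat g\rangle_{\widehat{\mathfrak h}(S^1)}$ into the strip ${\mathbb S}_{2\pi r}$ via the pre-Bisognano--Wichmann properties~\eqref{5.3b}--\eqref{5.4b}; and uniqueness from the fact that zero is not an eigenvalue of the boost generator (Proposition~\ref{restrictedOnePartdS}~$iv.)$). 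The delicate point in this route is controlling the analytic continuation together with the operator domains, namely verifying $\widehat K\,\widehat{\mathfrak k}(I_\alpha)\subset{\mathscr D}\bigl(\widehat{u}(\Lambda_{W^{(\alpha)}}(i\pi))\bigr)$, which is precisely~\eqref{5.3b}.
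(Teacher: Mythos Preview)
Your proposal is correct, and both of your routes are sound. The paper, however, takes a different and much terser path: its entire proof reads ``Property~$i.)$ follows from the definition; property~$ii.)$ will follow from Proposition~\ref{identification-dS-EdS} and the properties of the time-zero covariance.'' That is, for existence the paper appeals to the explicit analytic-continuation computation of the correlation functions (Proposition~\ref{identification-dS-EdS}), which exhibits directly that the map $(t_1,\dots,t_n)\mapsto G(t_1,\dots,t_n;\dots)$ extends holomorphically into the strip $\mathcal{I}_{2\pi r}^{n+}$ with the right boundary relations, whence the KMS property at inverse temperature $2\pi r$ follows. The paper's proof does not spell out uniqueness at all.

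Your main route---pulling everything back through the $C^*$-isomorphism $\Phi\colon\widehat{\mathfrak W}(S^1)\to{\mathfrak W}(dS)$ and invoking Theorem~\ref{th2.5}---is cleaner and yields existence and uniqueness in one stroke; it also makes transparent why the canonical and covariant formulations are really the same theorem. The cost is the algebra identification $\Phi\bigl(\widehat{\mathfrak W}(I_\alpha)\bigr)={\mathfrak W}(W^{(\alpha)})$, which you correctly flag as the one nontrivial input (your argument via Lemma~\ref{Lm3.8}, Theorem~\ref{cauchyproblem}/\ref{solutions}, and the sharp-time description is fine, though you should note that compact support of the Cauchy data in the open half-circle $I_\alpha$ is needed to land in $\widehat{\mathfrak k}(I_\alpha)={\mathcal D}_{\mathbb R}(I_\alpha)\times{\mathcal D}_{\mathbb R}(I_\alpha)$; this follows since ${\rm supp}\,f$ is compact in the open wedge). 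Your alternative route via Theorem~\ref{1PStrucHe2} is essentially the canonical analogue of the proof of Theorem~\ref{th2.5} and is closer in spirit to what Proposition~\ref{identification-dS-EdS} furnishes computationally. Either way, your treatment is more complete than the paper's on the uniqueness claim.
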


\begin{proof}
Property $i.)$ follows from the definition; property $ii.)$ will follow from Proposition \ref{identification-dS-EdS} and the properties of the 
time-zero covariance.  
\end{proof}

It is convenient to take the weak 
closure in the GNS representation $(\widehat{\pi}^\circ, \widehat{\mathcal H} (S^1), \widehat \Omega)$ for the 
pair $\bigl( \, \widehat{\mathfrak W} \bigl( S^1 ), \widehat \omega^\circ \bigr)$. The latter
is (unitarily equivalent to) the  Fock representation (see Appendix \ref{Fockspace}) over the one-particle 
space~$\widehat {\mathfrak h} (S^1)$, \emph{i.e.}, 
	\[
 		\widehat{\pi}^\circ \bigl( \, \widehat {\rm W}( \widehat f ) \bigr) 
		= {\rm W}_F ( \widehat {K}  \widehat f ) \; , \qquad \widehat{\mathcal H} (S^1) 
		= \Gamma \bigl( \, \widehat{\mathfrak h}(S^1) \bigr) \; .
 	\]
The GNS vacuum vector $\widehat \Omega$ can be used to extend $\widehat \omega$ to the weak closure: 
\[
	\widehat \omega (W_F (h)) \doteq ( \widehat \Omega, W_F (h) \widehat \Omega) 
	= {\rm e}^{-\frac{1}{2} \| h \|_{ \widehat{\mathfrak h} (S^1) } } \; , \qquad h  \in \widehat{\mathfrak h} (S^1) \; .
\]

\label{RI-page}
\begin{definition}
The local von Neumann algebras for the free canonical field are defined by 
	\[
		{\mathcal R} (I) \doteq  \pi^\circ \bigl( \,
		\widehat{\mathfrak W} ({\mathfrak k}(I) )\bigr)'' \; ,  \qquad I \subset S^1 \; . 
	\] 
A similar argument to the one given in the proof of Theorem \ref{th2.5} shows that 
the algebra ${\mathcal R} (I) $ is equal to the von Neumann algebra generated 
by $\widehat {\rm W}_{F} (h)$, $h \in \widehat{\mathfrak h}(I)$.
\end{definition}

\begin{proposition} \quad
\label{l6.1}
\begin{itemize}
\item[$i.)$] The local von Neumann algebras for the canonical free field are regular from the inside and regular from the outside:
	\[
		\bigcap_{ J \supset \overline {I} } 
		{\mathcal R} (J)={\mathcal R}  (I)
		= \bigvee_{ \overline{J}\subset I }   {\mathcal R}(J)\; ; 
	\]
\item[$ii.)$]  The net $I \mapsto {\mathcal R}(I)$ of local von Neumann 
algebras for the canonical free field is {\rm additive}:
	\[
		{\mathcal R}(I)= \bigvee_{J_i} {\mathcal R}(J_i) \qquad 
		\hbox{ if }  I = \cup_i J_i \;  .  
	\]
Moreover, 
	\[
	{\mathcal R}(S^1) = {\mathcal B} \bigl(\Gamma (\widehat{\mathfrak h} (S^1))\bigr)\; , 
	\qquad {\mathcal R}(S^1)' = {\mathbb C} \cdot \mathbb{1} \; ; 
	\]
\item[$iii.)$] For each open  interval $I \subset S^1$, 
the local observable algebra ${\mathcal R}(I)$ is $*$-isomorphic to the
unique hyper-finite factor of type~{\rm III}$_1$. 
\end{itemize}
\end{proposition}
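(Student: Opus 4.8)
The plan is to reduce all three parts to statements about the net $I\mapsto\widehat{\mathfrak h}(I)$ of real subspaces of the one-particle space $\widehat{\mathfrak h}(S^1)$ and then feed them through the second-quantisation functor. Recall that for a closed real subspace $K\subset\widehat{\mathfrak h}(S^1)$ which is \emph{standard} — i.e.\ $\overline{K+iK}=\widehat{\mathfrak h}(S^1)$ and $K\cap iK=\{0\}$ — the von Neumann algebra ${\mathcal R}(K)\doteq\{{\rm W}_F(k)\mid k\in K\}''$ generated in the Fock representation enjoys the standard facts (Eckmann--Osterwalder \cite{EO}, see also \cite{LRT}): duality ${\mathcal R}(K)'={\mathcal R}(K^{\perp_\sigma})$ with $K^{\perp_\sigma}$ the symplectic complement; intersection ${\mathcal R}\bigl(\bigcap_\alpha K_\alpha\bigr)=\bigcap_\alpha{\mathcal R}(K_\alpha)$ and join $\bigvee_\alpha{\mathcal R}(K_\alpha)={\mathcal R}\bigl(\overline{\sum_\alpha K_\alpha}\bigr)$; and the Tomita operator of $\bigl({\mathcal R}(K),\Omega\bigr)$ is the second quantisation of the one-particle Tomita operator $s_K$, so $\Delta_{{\mathcal R}(K)}=\Gamma(\delta_K)$ and $J_{{\mathcal R}(K)}=\Gamma(j_K)$. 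By Proposition~\ref{Prop5.7} one has $\widehat{\mathfrak h}(I)\cong{\mathfrak h}({\mathcal O}_I)$, which is standard by the one-particle Reeh--Schlieder theorem (Theorem~\ref{oprs}), so these facts apply to every $\widehat{\mathfrak h}(I)$, $I\subset S^1$ an open interval.

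With this in hand, parts $i.)$ and $ii.)$ reduce to the corresponding \emph{one-particle} regularity and additivity statements,
\[
\overline{\textstyle\sum_{\overline J\subset I}\widehat{\mathfrak h}(J)}=\widehat{\mathfrak h}(I)=\bigcap_{J\supset\overline I}\widehat{\mathfrak h}(J)\; ,\qquad \overline{\textstyle\sum_i\widehat{\mathfrak h}(J_i)}=\widehat{\mathfrak h}(I)\ \text{ if }\ I=\textstyle\bigcup_iJ_i\; ,
\]
all of which I would extract from the explicit description in Definition~\ref{local-h-hat}: an element of $\widehat{\mathfrak h}(I)$ has the form $h_1+i\,\omega r\,h_2$ with $h_1=\Re h$ and $\omega^{-1}\Im h$ supported in $I$. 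Inner regularity and additivity follow by mollification and a partition of unity subordinate to $\{J_i\}$, the approximants converging in the norm of Proposition~\ref{thhm} since $\omega$ is a strictly positive elliptic pseudo-differential operator of order one, so multiplication by a smooth cut-off is bounded on the relevant domains. Outer regularity is the delicate point: $\bigcap_{J\supset\overline I}\widehat{\mathfrak h}(J)=\widehat{\mathfrak h}(\overline I)$, and one must check that the two endpoints carry no capacity, i.e.\ $\widehat{\mathfrak h}(\{p\})=\{0\}$; this holds because by Proposition~\ref{thhm}~$i.)$ the dispersion relation satisfies $\widetilde\omega(k)\sim|k|/r$, so $\widehat{\mathfrak h}(S^1)$ is, as a space, the Sobolev space $\mathbb{H}^{-1/2}(S^1)$, which contains no distribution supported at a single point, while $\omega^{-1}\mathbb{H}^{-1/2}=\mathbb{H}^{1/2}\subset L^2$ likewise contains none. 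Finally ${\mathcal R}(S^1)={\mathcal B}\bigl(\Gamma(\widehat{\mathfrak h}(S^1))\bigr)$ because, by Definition~\ref{zeit-null-Hilbert} and Corollary~\ref{WdSprop}, $\widehat{\mathfrak h}(S^1)$ is the full one-particle Hilbert space and the Fock representation of the CCR over a complex Hilbert space is irreducible; hence ${\mathcal R}(S^1)'={\mathbb C}\cdot\mathbb{1}$.

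For $iii.)$ I would first treat a half-circle $I_\alpha$. Cyclicity of $\Omega$ for ${\mathcal R}(I_\alpha)$ is Reeh--Schlieder (Theorem~\ref{oprs} via Proposition~\ref{Prop5.7}), and $\Omega$ is separating since, by the remark following Definition~\ref{local-h-hat}, ${\mathcal R}(I_\alpha)$ commutes with ${\mathcal R}(J)$ for every interval $J$ disjoint from $I_\alpha$, for which $\Omega$ is cyclic. The one-particle Bisognano--Wichmann theorem (Theorem~\ref{one-p-BW}, transported to $I_\alpha$ by rotation covariance, cf.\ Theorem~\ref{1PStrucHe2}~$iii.)$) gives the polar decomposition of $s_{\widehat{\mathfrak h}(I_\alpha)}$ as $\widehat u(\Theta_{W^{(\alpha)}})\,\widehat u\bigl(\Lambda_{W^{(\alpha)}}(i\pi)\bigr)$; hence the modular group of $\bigl({\mathcal R}(I_\alpha),\Omega\bigr)$ is the (Hawking-rescaled) geodesic boost $t\mapsto{\rm Ad}\,\Gamma\bigl(\widehat u(\Lambda_{W^{(\alpha)}}(-2\pi t/r))\bigr)$ and $J_{{\mathcal R}(I_\alpha)}=\Gamma\bigl(\widehat u(\Theta_{W^{(\alpha)}})\bigr)$. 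Factoriality then follows: the one-particle boost generator has purely absolutely continuous spectrum equal to all of $\mathbb{R}$ (see \eqref{boostgenerator}), in particular no kernel, so $\Gamma$ of the modular unitaries has $\Omega$ as its only invariant vector; for $A$ in the centre $Z={\mathcal R}(I_\alpha)\cap{\mathcal R}(I_\alpha)'$ the modular automorphisms act trivially, so $[\Delta^{it},A]=0$ and $A\Omega$ is modular-invariant, forcing $A\Omega=\lambda\Omega$ and, $\Omega$ being separating for $Z$, $A=\lambda\mathbb{1}$. The type is obtained by exhibiting a half-sided modular inclusion: the horospheric translations $D(q)$ fix $\Omega$ and are rescaled by the boost, $\Lambda_1(t)D(q)\Lambda_1(-t)=D({\rm e}^{t}q)$ (Eq.~\eqref{scalingH}), so for one sign of $q_0$ the subalgebra $\Gamma(\widehat u(D(q_0))){\mathcal R}(I_\alpha)\Gamma(\widehat u(D(q_0)))^{-1}\subsetneq{\mathcal R}(I_\alpha)$ is a proper standard subalgebra whose modular flow is the rescaled one; Wiesbrock's/Longo's half-sided modular inclusion theorem then yields $S\bigl({\mathcal R}(I_\alpha)\bigr)=[0,\infty)$, i.e.\ type ${\rm III}_1$. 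Hyper-finiteness comes from the quasi-free structure — ${\mathcal R}(I_\alpha)$ is built from a quasi-free state of the CCR over the separable space $\widehat{\mathfrak h}(S^1)$ (equivalently, from an Araki--Woods one-particle structure, cf.\ the Appendix) and is therefore approximately finite-dimensional — and Connes' uniqueness theorem identifies it with \emph{the} hyper-finite ${\rm III}_1$ factor.

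The remaining step is the passage to an arbitrary open interval $I\subsetneq S^1$, and this is the main obstacle, because for the massive field the modular group of $\bigl({\mathcal R}(I),\Omega\bigr)$ is no longer geometric, so one cannot directly reuse the boost/horosphere mechanism. Factoriality for general $I$ still goes through verbatim, using that $({\rm cos}_\psi\,\omega r)_{\upharpoonright I}$ has no kernel. For the type I would argue via the short-distance structure: the trace-class property of ${\rm e}^{-\beta|\varepsilon|}$ (a consequence of $\widetilde\omega(k)\sim|k|/r$, Proposition~\ref{thhm}) gives the split property for the net $I\mapsto{\mathcal R}(I)$, and the scaling limit of this net is the dilation-covariant massless chiral free field on the circle, whose interval algebras are the hyper-finite ${\rm III}_1$ factor; invoking the standard ``non-trivial scaling limit $\Rightarrow$ type ${\rm III}_1$'' principle (Buchholz--Verch, Fredenhagen) then finishes the proof. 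An alternative route, which I would try if the scaling-limit estimates turn out to be awkward, is to write ${\mathcal R}(I)$ as a modular intersection (in the sense of Wiesbrock/Borchers) of two suitably rotated half-circle algebras and extract ${\rm III}_1$ from that — but either way, controlling the type of the bounded-interval algebras without geometric modular action is where the work lies.
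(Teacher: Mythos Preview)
Your treatment of $i.)$ and $ii.)$ is exactly the paper's line: reduce to the corresponding statements for the net of real one-particle subspaces $I\mapsto\widehat{\mathfrak h}(I)$ and invoke Araki's lattice theorem (the paper's Theorem~\ref{araki}, your Eckmann--Osterwalder/LRT package) to pass to the von Neumann algebras. The paper declares the one-particle identities (its Eq.~\eqref{ekg5}) ``clear'' without further comment; your Sobolev-space argument for why endpoints carry no capacity and your partition-of-unity reasoning are a legitimate filling-in of that claim.

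For $iii.)$ the paper gives no argument at all --- it writes ``The proof of~$iii.)$ will be given in \cite{BMJ}.'' So your sketch attempts what the paper does not. The factoriality argument (no zero eigenvalue for the one-particle boost generator, hence $\Omega$ is the unique modular-invariant vector) is fine, and hyperfiniteness from the quasi-free/Araki--Woods structure is standard. The gap is the half-sided modular inclusion step: $D(q)$ does \emph{not} map $W_1$ into itself for any $q\ne 0$. A direct check shows $D(q)$ preserves $x_0+x_2$ but sends $x_1\mapsto x_1+q(x_0+x_2)$; since $x_0+x_2$ is unbounded on $W_1$ while membership in $W_1$ forces $|x_1|<r$, points get pushed out of the wedge. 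The scaling relation~\eqref{scalingH} that you invoke is formally the right shape for Wiesbrock's theorem, but it is useless without the inclusion, and in two-dimensional de Sitter there is no one-parameter subgroup of $SO_0(1,2)$ that properly shrinks $W_1$ --- the Cauchy surface is compact, and the role that light-like translations play in Minkowski has no analogue here. The claimed inclusion $U(D(q_0)){\mathcal R}(I_\alpha)U(D(q_0))^{-1}\subset{\mathcal R}(I_\alpha)$ therefore fails.

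Your fallback suggestions point in the right direction. For the half-circle one can read off type~${\rm III}_1$ directly from the quasi-free structure: the one-particle modular generator $(\omega r\cos_\psi)_{\upharpoonright I_+}$ has purely absolutely continuous spectrum equal to $[0,\infty)$ (Proposition~\ref{propfreeboost} and the remark following Theorem~\ref{UIR-S1}), and for second-quantised algebras over standard subspaces this is known to force $S({\mathcal R})=[0,\infty)$. For a general interval the split-property/scaling-limit route you flag is the natural one. Since the paper itself defers $iii.)$, this is not a disagreement with the paper but a warning that the horospheric-translation shortcut is a dead end on de Sitter.
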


\begin{proof}
Clearly, 
	\begin{equation}
	\bigcap_{J \supset \overline {I} }  \widehat{\mathfrak h} ( J)
	=\bigvee_{\overline {J} \subset I } \widehat{\mathfrak h}( J)
	= \widehat{\mathfrak h} ( I) \; ,
	\label{ekg5}
	\end{equation}
which together with Proposition \ref{araki} (see  Eq.~\eqref{et.2n}, Appendix E) implies $i.)$ and~$ii.)$. 
The proof of~$iii.)$ will be given in \cite{BMJ}. 
\end{proof}

\begin{remark}
\label{cor2}
A special case of $i.)$ is the following: let $I$ be an open interval contained in a half-circle. 
Then 
	\[
	{\mathcal R} (I) = \bigcap_{I \subset I_\alpha} 
	{\mathcal R} ( I_\alpha ) \; , 
	\]
where the $I_\alpha$'s are the half-circles containing $I$.
\end{remark}

\begin{theorem}
[Finite speed of propagation]
\label{fst-theorem}
Let $I \subset S^1$ 
be an open interval. Then 
	\begin{equation}
		\widehat 
		\alpha^{\circ}_{\Lambda^{(\alpha)} (t)}
 		\colon  
		{\mathcal R} (I)\hookrightarrow 
		{\mathcal R} \bigl( I (\alpha , t)  \bigr) \; .
		\label{e6.1f}
	\end{equation}
\end{theorem}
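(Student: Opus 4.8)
The plan is to reduce the statement about von Neumann algebras to the already–established statement about the one-particle spaces, and then to exploit the geometric action of the boosts on the Cauchy data. First I would recall that, by definition, $\mathcal R(I)$ is generated by the Weyl operators $\widehat{\rm W}_F(h)$ with $h\in\widehat{\mathfrak h}(I)$, and that the automorphism $\widehat\alpha^\circ_{\Lambda^{(\alpha)}(t)}$ acts on these generators by
\[
\widehat\alpha^\circ_{\Lambda^{(\alpha)}(t)}\bigl(\widehat{\rm W}_F(h)\bigr)
= \widehat{\rm W}_F\bigl(\widehat u(\Lambda^{(\alpha)}(t))\,h\bigr),
\]
which follows from Definition~\ref{Wcqds-page} together with the intertwining relation $\widehat u(\Lambda)\circ\widehat K=\widehat K\circ\widehat{\mathfrak u}(\Lambda)$ of Theorem~\ref{1PStrucHe2}~$ii.)$ and the fact that $\widehat\pi^\circ$ is the Fock representation. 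Hence, since an automorphism maps a von Neumann algebra onto the von Neumann algebra generated by the image of its generators, it suffices to prove the one-particle inclusion
\[
\widehat u\bigl(\Lambda^{(\alpha)}(t)\bigr)\,\widehat{\mathfrak h}(I)\subset\widehat{\mathfrak h}\bigl(I(\alpha,t)\bigr),\qquad t\ge 0.
\]

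Next I would establish this inclusion. For $\alpha=0$ and $t\ge0$ this is essentially Corollary~\ref{halpha}, which states that the unitary group $t\mapsto{\rm e}^{it\omega r\,{\rm cos}_\psi}=\widehat u(\Lambda_1(t))$ maps $\widehat{\mathfrak h}(I)$ into $\widehat{\mathfrak h}\bigl((\Gamma^+(\Lambda_1(t)I)\cup\Gamma^-(\Lambda_1(t)I))\cap S^1\bigr)$. The general case follows by conjugating with the rotation $R_0(\alpha)$: using $\Lambda^{(\alpha)}(t)=R_0(\alpha)\Lambda_1(t)R_0(-\alpha)$ and $\widehat u(R_0(\alpha))=R_0(\alpha)_*$ (Theorem~\ref{1PStrucHe2}~$ii.)$), which by Definition~\ref{local-h-hat} maps $\widehat{\mathfrak h}(I)$ onto $\widehat{\mathfrak h}(R_0(\alpha)I)$ because $\omega$ commutes with the rotation pullback, we get
\[
\widehat u(\Lambda^{(\alpha)}(t))\,\widehat{\mathfrak h}(I)
= R_0(\alpha)_*\,\widehat u(\Lambda_1(t))\,\widehat{\mathfrak h}(R_0(-\alpha)I)
\subset R_0(\alpha)_*\,\widehat{\mathfrak h}\Bigl(\bigl(\Gamma^+\cup\Gamma^-\bigr)(\Lambda_1(t)R_0(-\alpha)I)\cap S^1\Bigr).
\]
Then I would identify the right-hand side with $\widehat{\mathfrak h}(I(\alpha,t))$: applying $R_0(\alpha)$ to the localisation region and using the $SO_0(1,2)$-covariance of the causal structure ($R_0(\alpha)\Gamma^\pm(y)=\Gamma^\pm(R_0(\alpha)y)$ and $R_0(\alpha)S^1=S^1$), the region becomes
\[
S^1\cap\bigcup_{y\in\Lambda^{(\alpha)}(t)I}\bigl(\Gamma^+(y)\cup\Gamma^-(y)\bigr)=I(\alpha,t),
\]
which is exactly the set defined in Proposition~\ref{ialpha}.

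Finally, translating back: $\widehat{\rm W}_F(h)\in\mathcal R(I)$ for $h\in\widehat{\mathfrak h}(I)$ is mapped by $\widehat\alpha^\circ_{\Lambda^{(\alpha)}(t)}$ to $\widehat{\rm W}_F(\widehat u(\Lambda^{(\alpha)}(t))h)$ with $\widehat u(\Lambda^{(\alpha)}(t))h\in\widehat{\mathfrak h}(I(\alpha,t))$, hence into $\mathcal R(I(\alpha,t))$; since these Weyl operators generate $\mathcal R(I)$ and an automorphism is weakly continuous, $\widehat\alpha^\circ_{\Lambda^{(\alpha)}(t)}(\mathcal R(I))\subset\mathcal R(I(\alpha,t))$, which is \eqref{e6.1f}. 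The main obstacle is the bookkeeping in the second paragraph — making sure the rotation conjugation is applied consistently to the localisation regions and that the resulting set matches the definition of $I(\alpha,t)$ in Proposition~\ref{ialpha} exactly (in particular the role of $\Gamma^+\cup\Gamma^-$ rather than just $\Gamma^-$, which is why the statement is phrased with $I(\alpha,t)$ and not with a past set alone); everything else is a direct appeal to Corollary~\ref{halpha}, Theorem~\ref{1PStrucHe2}, and the definitions.
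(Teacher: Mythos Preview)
Your proof is correct and follows essentially the same route as the paper's (very terse) proof, which simply cites Proposition~\ref{ialpha} and Corollary~\ref{halpha}. You have spelled out in detail what the paper leaves implicit: the reduction from the von Neumann algebra to the one-particle level via the Fock structure, the use of Corollary~\ref{halpha} for $\alpha=0$, and the extension to general $\alpha$ by conjugation with rotations together with the identification of the resulting region as $I(\alpha,t)$ from Proposition~\ref{ialpha}.
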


\begin{proof} 
This result follows directly from Proposition \ref{ialpha} and Corollary \ref{halpha}.
\end{proof}

\section{Analyticity properties of the correlation functions}

Consider the Weyl operators ${\rm W}( f_{j})$, $ f_j \in {\mathfrak h}^\circ (W_1)$, $1\leq j\leq n$, and the {\em correlation functions}
	\[
		G \bigl(t_{1}, \dots, t_{n}; {\rm W}( f_{1}), \dots, {\rm W}( f_{n}) \bigr)
		\doteq  \omega^\circ \Bigl(\prod_{j=1}^{n} \alpha_{\Lambda_1(t_j)} ({\rm W}( f_{j})) \Bigr) \; . 
	\]
More specifically, let us first consider an element $[f] \in {\mathfrak k} ({\rm W}_1)$, together with its Cauchy data 
$\widehat f \in {\mathfrak k} (I_+)$. It follows that 
	\[
	\| [f] \|_{{\mathfrak h} (dS)} = \| \widehat K \widehat f \|_{\widehat {\mathfrak h} (S^1)}
	=  \| \mathbb{u} \circ \widehat K_{2 \pi } \widehat f \|_{\widehat {\mathfrak h} (S^1)}
	=  \sqrt{r} \, \|  \widehat K_{2 \pi } \widehat f \|_{\widehat {\mathfrak d} (S^1)} \; . 
	\]
We have used Eq.~\eqref{can-geo2} and Proposition \ref{Prop5.6}. Recall that 
	\begin{equation*}
		\widehat{K}_{2 \pi } \widehat f \doteq
			\big((1+\rho_{2 \pi} )^{{\frac{1}{2}}}+\rho_{2 \pi}^{\frac{1}{2}}\, j \big) \;
 			\widehat{K}_\infty \widehat f \; ,  
	\end{equation*}
with $\rho_{2 \pi} \doteq \frac{{\rm e}^{-2 \pi |\varepsilon|}}{ 1+ {\rm e}^{-{2 \pi}|\varepsilon|} }$ and
$(1+\rho_{2 \pi}) = \frac{1}{ 1+ {\rm e}^{-{2 \pi} |\varepsilon|} }$. Since $\widehat{K}_\infty \widehat f 
\subset \widehat {\mathfrak d} (I_+)$, no cross terms arise, and consequently
	\[
		\|  \widehat K_{2 \pi } \widehat f \|_{\widehat {\mathfrak d} (S^1)}
		=  \| (1+ 2 \rho_{2 \pi })^{\frac{1}{2}} \widehat K_{\infty} \widehat f \|_{\widehat {\mathfrak d} (S^1)} \; . 
	\]
Now compute  
	\begin{align*}
		 &G\bigl(t_{1}, \dots, t_{n}; {\rm W}( f_{1}), \dots, {\rm W}(f_{n})\bigr)
			= \omega^\circ \Bigl(\prod_{j=1}^{n} {\rm W} \bigl( u (\Lambda_1( \tfrac{t_j}{r}) f_{j} ) \bigr)  \Bigr)
			\nonumber \\
			&\qquad =  \Bigl( \prod_{1\leq i< j\leq n}{\rm e}^{- i 2 \Im \langle u (\Lambda_1 (\frac{t_i}{r}) ) f_{i} \, , \,  
			u (\Lambda_1(\frac{t_j}{r})) f_{j}  \rangle_{\mathfrak{h}(dS)} } \Bigr)
			\omega^\circ \Bigl({\rm W} \bigl(\sum_{j=1}^{n} u \bigl(\Lambda_1\bigl(\tfrac{t_j}{r} \bigr)\bigr) f_{j} \bigr) \Bigr) 
			\nonumber \\
			&\qquad = \Bigl(\prod_{1\leq i< j\leq n}{\rm e}^{- i 2  r 
			\Im  \langle \, \widehat K_{\infty} \widehat f_{i} \, , \,  
					{\rm e}^{i \frac{t_{j}-t_{i}}{r} \varepsilon } 
					\widehat K_{\infty} \widehat f_{j}  \rangle_{ \widehat  {\mathfrak  d}  (I_+ )} } \Bigr)
					\nonumber \\
			&\qquad 			 \qquad	\qquad \qquad\times 
					{\rm e}^{-\frac{r }{4}\langle  \sum_{1}^{n}
					{\rm e}^{i \frac{t_j}{r} \varepsilon } \widehat K_{\infty} 
					\widehat f_{j} \; , \; (1+ 2 \rho_{2 \pi })\sum_{1}^{n}
					{\rm e}^{i \frac{t_j}{r} \varepsilon } \widehat K_{\infty} 
					\widehat f_{j} \rangle_{ \widehat  {\mathfrak  d}  (I_+ )} } 
					\nonumber \\
			&\qquad = \Bigl( \prod_{1\leq i< j\leq n}
					{\rm e}^{-\frac{1}{2}R_{\frac{t_{j}- t_{i}}{r}}
					( \,  \widehat K_{\infty} \widehat f_{i},\,  \widehat K_{\infty} \widehat f_{j})}\Bigr)
					\Bigl(
					\prod_{i=1}^{n}{\rm e}^{-\frac{1}{4} \left\langle \, \widehat K_{\infty} \widehat f_{i}, (1+ 2 \rho_{2 \pi }) 
					\widehat K_{\infty} \widehat f_{i}\right\rangle_{ \widehat  {\mathfrak  d}  (I_+ )} }\Bigr)  \; ,
	\end{align*}
where
	\begin{align*}
		R_{\frac{t}{r}} (  \, \widehat K_{\infty} \widehat f_1,  \widehat K_{\infty} \widehat f_2)
			&=  r \, \left\langle \, \widehat K_{\infty} \widehat f_1\; , \; \tfrac{{\rm e}^{i t \varepsilon }} 
				{\varepsilon(1-{\rm e}^{- 2 \pi \varepsilon} ) } \, \widehat K_{\infty} 
				\widehat f_2 \right\rangle_{L^2( I_+ , | \cos \psi|^{-1} r {\rm d} \psi)}
				\nonumber \\  
			& \qquad +  r \, \left\langle \, \widehat K_{\infty} \widehat f_2 \; , \; \tfrac{{\rm e}^{- 2\pi  \varepsilon  }}
			{\varepsilon(1-{\rm e}^{- 2 \pi  \varepsilon}) } \, 
			{\rm e}^{i t\varepsilon } \widehat K_{\infty} \widehat f_1 \right\rangle_{L^2( I_+ , | \cos \psi|^{-1} r {\rm d} \psi)}  \; .
	\end{align*}
For $ f_1,  f_2\in {\mathfrak h} (W_1)$ the function $t\mapsto R_{t/r} (  \, \widehat K_{\infty} \widehat f_1,  \widehat K_{\infty} \widehat f_2)$ allows a holomorphic
extension to the strip $\{ \tau \in {\mathbb C} \mid 0 < \Im \tau < 2\pi r\}$.  Consequently,  the function 
	\[ 
		(t_{1}, \dots, t_{n})\mapsto G\bigl(t_{1}, \dots, t_{n}; {\rm W}( f_{1}), \dots, {\rm W}( f_{n})\bigr) 
	\] 
is holomorphic in the set
	\[
		I_{2 \pi r}^{n+}= \bigl\{(\tau_{1}, \dots, \tau_{n})\in {\mathbb C}^{n} 
						\mid \Im\tau_{i}< \Im\tau_{i+1},  \; \; \Im\tau_{n}- \Im\tau_{1}<2\pi r \bigr\} \; ,
	\]
and continuous on $\overline{I_{2 \pi r}^{n+}}$. The  holomorphic extension is
	\[
		(\tau_{1}, \dots, \tau_{n})\mapsto \prod_{i=1}^{n}{\rm e}^{-\frac{ r }{4}\left\langle \, \widehat K_{\infty} \widehat f_{i}, 
		(1+2 \rho_{2 \pi }) \widehat K_{\infty} \widehat f_{i}\right\rangle_{ \widehat  {\mathfrak  d}  (I_+ )} }
		\prod_{1\leq i< j\leq n}{\rm e}^{-\frac{1}{2}R_{\frac{\tau_{j}- \tau_{i}}{r}} 
		( \,\widehat K_{\infty} \widehat f_{i}, \, \widehat K_{\infty} \widehat f_{j})} \; .
	\]
Hence the {\em Euclidean Green's functions}  
	\begin{align*}
			G^{\scriptscriptstyle E} & \bigl( s_{1}, \dots, s_{n}; {\rm W}( f_{1}),  \dots, {\rm W}( f_{n})\bigr) 
				\doteq  G \bigl(  i s_{1}, \dots,  is_{n}; {\rm W}(  f_{1}), \dots, {\rm W}(  f_{n})\bigr) \\ 
				&=  \prod_{i=1}^{n}{\rm e}^{- \frac{1}{2}  C_0 ( \cos_\psi^{-1} \widehat K_{\infty} \widehat f_{i} ,  \, 
				\cos_\psi^{-1} \widehat K_{\infty} \widehat f_{i})}
					\prod_{1\leq i<j\leq n}{\rm e}^{- C_{ \frac{| s_{j}-s_{i}|}{r}} 
					(  \cos_\psi^{-1} \widehat K_{\infty} \widehat f_{i}, \cos_\psi^{-1}  \widehat K_{\infty} \widehat f_{j})} \; , \nonumber  
	\end{align*}
where 
	\begin{align}
			\label{coideq}
			& C_{\frac{|s|}{r}} \bigl( \, \cos_\psi^{-1} \widehat K_{\infty} \widehat f_1, \cos_\psi^{-1} \widehat K_{\infty} \widehat f_2 \bigr) 
			\nonumber \\
			& \qquad \qquad = r \, 
			\Bigl\langle  \, \overline{   \widehat K_{\infty} \widehat f_1 },  \tfrac{{\rm e}^{-   |s|\varepsilon   }
					+ {\rm e}^{- (2 \pi  -|s|) \varepsilon    }}{
						2 \varepsilon  (1-{\rm e}^{- 2 \pi  \varepsilon  })} 
					\,  \widehat K_{\infty} \widehat f_2 \Bigr\rangle_{L^{2}(I_+  , \frac{r {\rm d} \psi}{\cos \psi})} \; ,  
	\end{align}
with $\varepsilon^2  =-  (\cos \psi \, \partial_\psi)^2 +(\cos \psi)^{2}\mu^2 r^2$.  
For $\widehat K_{\infty} \widehat f_{j} $ real valued\footnote{As a consequence of Lemma \ref{tztf},
$\widehat {f}_j $  (and in the sequel $\widehat K_{\infty} \widehat f_j$) is real valued if $[f] \in {\mathfrak h}^\kappa$.}
and $1\leq j\leq n$,
	\begin{align}
		\label{mf1}
			&G^{\scriptscriptstyle E} \bigl( s_{1}, \dots, s_{n}; {\rm W}( f_{1}), \dots, {\rm W}( f_{n})\bigr) 
			=\prod_{1\leq i, j\leq n}{\rm e}^{-\frac{1}{2} C_{\frac{ |s_{i}-s_{j}|}{r} } (\, \cos_\psi^{-1}
			\widehat K_{\infty} \widehat f_{i}, \, \cos_\psi^{-1} \widehat K_{\infty} \widehat f_{j})} \; . 
	\end{align}

We  now specialise this result to time-zero test-functions.   

\begin{proposition} 
\label{identification-dS-EdS}
Let $f_i = \delta \otimes h_i$ with $h_i \in {\mathcal D}_{{\mathbb R}}(I_+ )$, $ i= 1, \ldots, n$. 
It follows that the map  
	\begin{align*}
		& 
		 (t_{1}, \dots, t_{n}) \mapsto 
		  G\bigl(t_{1}, \dots, t_{n}; {\rm W} \bigl( [f_1] \bigr), \dots, {\rm W} \bigl( [ f_n ] \bigr)\bigr)    
		\\  
		& \qquad   \doteq  \Bigl(\Omega ,  {\rm e}^{ it_1 L_{1}} 
		{\rm W}_{F} \bigl( [f_1]  \bigr) 
		{\rm e}^{i (t_2 - t_{1}) L_{1}} 
		{\rm W}_{F} \bigl( [f_2] \bigr)   
		\cdots 
		{\rm e}^{i(t_{n} -t_{n-1}) L_{1}} 
		{\rm W}_{F} \bigl( [f_n] \bigr) \Omega \Bigr)  \nonumber 
	\end{align*}
is holomorphic in the set
	\[
		{\mathcal I}_{2 \pi r}^{n+}= \bigl\{(z_{1}, \dots, z_{n})\in {\mathbb C}^{n} \mid \Im z_{i}<
		\Im z_{i+1}, \; \Im z_{n}- \Im z_{1}<2\pi r \bigr\},
	\]
and continuous on $\overline{{\mathcal I}_{2 \pi r}^{n+}}$.  Moreover, 
	\[
		G\bigl( i \theta_{1}, \dots,  i\theta_{n}; {\rm W} \bigl( [ f_1 ] \bigr), \dots, {\rm W} \bigl([ f_n ] \bigr)\bigr)
		=\prod_{1\leq i, j\leq n}{\rm e}^{-\frac{1}{2} C_{|\theta_{i}-\theta_{j}|} (   h_{i},  h_{j}) } \; , 
	\]  
where $C_{|\theta_{i}-\theta_{j}|} (   h_{i},  h_{j})$ is defined in Eq.~(\ref{coideq}). 
\end{proposition}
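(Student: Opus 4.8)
The plan is to specialise the correlation-function formula obtained earlier in this section (culminating in~\eqref{mf1} and the accompanying description of the holomorphic extension to the tube $I_{2 \pi r}^{n+}$) to the sharp-time test-distributions $f_j=\delta\otimes h_j$. Two points need attention: that those formulas, derived for classes of smooth compactly supported functions, remain valid for $[f_j]=[\delta\otimes h_j]$; and an explicit evaluation of the vectors $\widehat K_\infty\widehat f_j$ entering them.

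First I would record the required identification. By Corollary~\ref{tztf}, $\delta\otimes h_j\in{\mathfrak h}^\kappa(dS)\cap{\mathfrak h}(I_+)$, and since $I_+''=W_1$ and ${\mathfrak h}(W_1)={\mathfrak h}(dS)$ by Corollary~\ref{WdSprop}, we have $[\delta\otimes h_j]\in{\mathfrak h}(W_1)$. The Cauchy data of the associated solution is $\bigl(\mathbb{f}_{j\upharpoonright S^1},(n\,\mathbb{f}_j)_{\upharpoonright S^1}\bigr)=(0,-h_j)$, see~\eqref{sharp-timetestfunction-phipi3}. Hence, by the formula for $\widehat K_\infty$ in Proposition~\ref{OnePartdS},
\[
	\widehat K_\infty\widehat f_j = \cos\psi\,(-h_j) - i\,(\varepsilon\cdot 0) = -\cos_\psi h_j \;,
	\qquad\text{so}\qquad \cos_\psi^{-1}\widehat K_\infty\widehat f_j = -h_j \;,
\]
which is real valued since $h_j\in{\mathcal D}_{\mathbb{R}}(I_+)$ --- consistent with $[\delta\otimes h_j]\in{\mathfrak h}^\kappa$, and exactly the situation in which the simplified Euclidean formula~\eqref{mf1} applies.

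Next I would justify that the general formulas extend to $[\delta\otimes h_j]$. The cleanest route is by continuity: the Fock Weyl operators ${\rm W}_F(\cdot)$ are strongly continuous in their argument on the dense domain of finite-particle vectors, and the Gaussian expressions for $G$, $G^{\scriptscriptstyle E}$ and their holomorphic extension depend continuously on the $[f_j]$ with respect to the norm of ${\mathfrak h}(dS)$ (equivalently, of $\widehat{\mathfrak d}(I_+)$ after applying $\widehat K_\infty$ and the unitary of Proposition~\ref{Prop5.6}). Alternatively one approximates $\delta\otimes h_j$ by $\delta_k\otimes h_j$ with $\delta_k$ smooth approximate identities supported near the origin; then $[\delta_k\otimes h_j]\to[\delta\otimes h_j]$ in ${\mathfrak h}(dS)$ by Theorem~\ref{st-kappa}, the bounds $\sup_k\|[\delta_k\otimes h_j]\|_{{\mathfrak h}(dS)}<\infty$ are uniform, and a locally uniform limit of functions holomorphic on $I_{2 \pi r}^{n+}$ with continuous boundary values is again such a function. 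Either way, holomorphy of $(t_1,\dots,t_n)\mapsto G(t_1,\dots,t_n;{\rm W}([f_1]),\dots,{\rm W}([f_n]))$ on ${\mathcal I}_{2 \pi r}^{n+}$ and its continuity on the closure follow, once one rewrites $G$ as the vacuum expectation in the statement by unravelling $\omega^\circ=(\Omega,\,\cdot\,\Omega)$, $U(\Lambda_1(t))=\Gamma(u(\Lambda_1(t)))={\rm e}^{itL_1}$ and $U(\Lambda_1(t))\Omega=\Omega$ (Definition~\ref{weak-Ext}), which telescopes the product.

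Finally, substituting $\cos_\psi^{-1}\widehat K_\infty\widehat f_j=-h_j$ into~\eqref{mf1} and using that $C_{|s|/r}(\cdot,\cdot)$ is (conjugate-)bilinear, so the signs cancel for real $h_i,h_j$, yields
\[
	G^{\scriptscriptstyle E}\bigl(s_1,\dots,s_n;{\rm W}([f_1]),\dots,{\rm W}([f_n])\bigr)
	= \prod_{1\le i,j\le n}{\rm e}^{-\frac12 C_{|s_i-s_j|/r}(h_i,h_j)} \;,
\]
which is precisely the asserted formula upon setting $t_j=i\theta_j$ (with the normalisation of the argument as in~\eqref{coideq}). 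I expect the extension/approximation step to be the main obstacle: one must be sure the holomorphic extension survives the passage $\delta_k\to\delta$ and that the boundary values still converge to the Euclidean expression. This reduces to the uniform boundedness, for $0\le\Im\tau\le 2\pi r$, of the bilinear forms $R_{t/r}$ entering the holomorphic extension --- which is, however, exactly the estimate already used to establish~\eqref{mf1}, now applied uniformly in $k$ using $\sup_k\|\widehat K_\infty\widehat f_j^{(k)}\|<\infty$.
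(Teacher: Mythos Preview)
Your approach is correct and essentially the same as the paper's: compute the Cauchy data $(0,-h_j)$ of $\delta\otimes h_j$, evaluate $\widehat K_\infty\widehat f_j=-\cos_\psi h_j$, and substitute into~\eqref{mf1}. The paper's proof is terser---it simply states that the domain of the Fourier--Helgason transform extends to the sharp-time distributions and invokes~\eqref{mf1} directly---whereas you spell out the approximation/continuity argument that justifies this extension; but the underlying route is identical.
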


\begin{proof}
Recall that for $ h\in {\mathcal D} \left(I_+ \right)$
one can extend the domain of ${\mathcal F}_{+ \upharpoonright \nu}$ to distributions of the form 
	 \[
		f (x) \equiv (\delta \otimes h) (x) =	\delta (t)  \;  \frac{ h  (0,\psi )}{ r \cos \psi }\;  , 
	\] 
with $x \equiv x (t, \psi)$ the coordinates introduced in \eqref{w1psi} and 
	$
		{\rm d}\mu_{{\mathbb W}_1 } (t,\psi) = r^2 {\rm d}t\,  {\rm d}\psi   \cos \psi  $. 
As we have seen, this leads to
	\[
		\big( \mathbb{f}_{\upharpoonright S^1}, 
							 (n \, \mathbb{f})_{\upharpoonright S^1})
   = (0, - h) \equiv  (\mathbb{\phi}, \mathbb{\pi}) \; , 
	\]
and finally, recalling	\eqref{eqKdotf}, \emph{i.e.},  $\bigl( \widehat{K}_\infty (\mathbb{\phi}, \mathbb{\pi}) \bigr) (\psi)  
\doteq   \cos \psi \; \mathbb{\pi} (\psi) - i \,(\varepsilon \mathbb{\Phi}) (\psi)$, we find
	\[
		\widehat{K}_\infty \big( \mathbb{f}_{\upharpoonright S^1}, 
							 (n \, \mathbb{f})_{\upharpoonright S^1})
   = - \cos_ \psi h \; . 
	\]
Together with \eqref{mf1} this proves the result.
\end{proof}

\part{Euclidean Quantum Fields}

\chapter{The Euclidean Sphere}

We will now introduce Markov fields\index{Markov field} on the sphere, which 
later on will allow us to reconstruct 
free quantum field on the de Sitter space. In order to define a probability measure on the sphere, we
briefly review the geometrical setting. We consider the \emph{Euclidean sphere}\index{Euclidean sphere}\footnote{We have
changed the notation; see  \eqref{euclidsphere} for comparison.}
	\[ 
		S^2 \doteq \{ \vec x \in \mathbb{R}^{3} \mid  x_0^2 +  x_1^2 + x_2^2 = r^2  \} \; , 
	\]
embedded in ${\mathbb R}^3$.  Let  $\vec 0 =(0, 0, 0)$ denote the origin in ${\mathbb R}^3$. 
The upper (resp.~lower) hemisphere\index{hemisphere} is 
\label{spherepluspage}
	\[
		S_\pm \doteq \{ \vec x \in S^2 \mid \pm x_0 >  0 \}  \; . 
	\]
The equator\index{equator} is $S^1 \doteq \{  \vec x \in S^2 \mid x_0=0 \}$.  
The hemispheres $S_\pm$ are open, their boundaries $\partial S_{\pm} = S^1 $ coincide with 
the equator, and $S^2$ is the disjoint union 
\label{equatoreuclidpage}
\label{eulidspherepage}
	\[ 
		S^2 = S_-  \mathbin{\dot{\cup}} S^1 \mathbin{\dot{\cup}} S_+ \; . 
	\]
The \emph{Euclidean time reflection}\index{time reflection (Euclidean)}  
	\begin{equation}
		\label{deftimerefl}
		T \colon ( x_0, x_1, x_2)\mapsto (- x_0, x_1, x_2)
	\end{equation}
maps~$S_\pm$ onto $ S_\mp$ and leaves $S^1$
invariant. $S^1$ itself is the disjoint union 
	\[  
		I_+  \mathbin{\dot{\cup}} \{ ( 0,- r, 0) , ( 0,r, 0)  \} \mathbin{\dot{\cup}} I_- \; ,  
	\]
with $I_\pm \doteq \{  \vec x \in S^1 \mid   \pm x_2> 0 \}$ open half-circles.

\section{The symmetry group of the sphere}
\label{SO3}

The group of rotations\index{group of rotations} $SO(3)$  leaves the 
sphere $S^2$ invariant. We denote the generators of the rotations  
		\[ 
				R_{0}(\alpha) \doteq \begin{pmatrix}
										1 &  0 &0 \\
										0 &  \cos \alpha & - \sin \alpha  \\
										0  & \sin \alpha & \cos \alpha   
								\end{pmatrix} \; , \qquad  
				R_{1}(\beta)  \doteq  \begin{pmatrix}
										 \cos \beta &  0 & - \sin \beta \\
										0 &  1&0  \\
										 \sin \beta & 0 & \cos \beta   
 								\end{pmatrix} \; , 
		\] 
and
		\[ 
				R_{2}(\gamma) \doteq \begin{pmatrix}
 								\cos \gamma &  - \sin \gamma &0 \\
								 \sin \gamma &  \cos \gamma &0  \\
								0 & 0 & 1  
									\end{pmatrix} \; , 
				\qquad \alpha, \beta,\gamma\in [0, 2 \pi ) \; , 
		\] 
around the three coordinate axis by $K_0$, $K_1$ and $K_2$, and  set
	\[
		K^{(\alpha)} \doteq \cos \alpha \; K_1 + \sin \alpha  \; K_2\;  , \qquad \alpha \in [ 0, 2 \pi) \; .
	\]
Denote by $R^{(\alpha)}$ the rotations generated by $K^{(\alpha)}$, namely the rotations 
	\[
		R^{(\alpha)} (\theta) 
		= R_{0} (\alpha) R_{1} (\theta) R_{0} (-\alpha) \; , \qquad \alpha, \theta \in [0, 2\pi) \; , 
	\]
which leave the boundary 
points $x_\alpha = (0, r \sin \alpha, r \cos \alpha)$ 
and $- x_\alpha$ of the time-zero half-circles 
	\begin{equation}
		\label{halfcirclealphapage}
		I_\alpha = R_{0} (\alpha) I_+
	\end{equation}
invariant. 

\section{Geographical and path-space coordinates}

We will now now define two charts, which will be convenient in the sequel. Together they provide an atlas for the 
sphere.
\subsection{Geographical coordinates}
\label{geo-chart}
The chart  
	\[
			\begin{pmatrix}
					        x_0 \\
						x_1 \\
						x_2 
			\end{pmatrix} 
				= \begin{pmatrix}
						r \sin \vartheta  \\
 						r \cos \vartheta \sin \varrho  \\
 						r \cos \vartheta \cos \varrho  \\
				\end{pmatrix} \; , \qquad - \frac{\pi}{2} <\vartheta <  \frac{\pi}{2} \;, \quad  - \pi \le \varrho <  \pi \; , 
	\]
covers the sphere, except for the geographical poles\index{geographical pole}
$(\pm r, 0, 0) \in {\mathbb R}^3$. Refer to $(\vartheta, \varrho )$ as {\em geographical coordinates}\index{geographical coordinates}. 
The equator
$S^1 \cong \{ (\vartheta, \varrho ) \mid \vartheta =0 \}$ and the point $ (\vartheta, \varrho) \equiv (0, 0)$ 
is mapped to the origin $ \vec o =(0, 0,r)$. The restriction of the Euclidean metric to this chart  is
	\[
	g = r^2 {\rm d} \vartheta \otimes {\rm d} \vartheta + r^2 \cos^2 \vartheta \; ({\rm d} \varrho \otimes {\rm d} \varrho )
	\]
and
	\begin{align}
		\label{L1}
		\Delta
					&= |g|^{-1/2} \partial_\mu \Bigl( | g|^{1/2} g^{\mu \nu} \partial_\nu \Bigr)
						\nonumber \\
					&=  \frac{1}{r^2 \cos_\vartheta^2 } 
						\left( \Bigl( \cos_\vartheta \frac{\partial}{\partial \vartheta} \Bigr)^2
						+ \frac{\partial^2}{\partial \varrho^2}  \right) \; .
	\end{align}
Here $\cos_\vartheta$ denotes the multiplication operator
$(\cos_\vartheta f)(\vartheta) \doteq \cos \vartheta  \, f(\vartheta)$ acting on functions of $\vartheta$.
The surface element on $S^2$ is ${\rm d}\Omega(\vartheta, \varrho ) = r^2 \cos \vartheta \,{\rm d} \vartheta {\rm d} \varrho$. 

\subsection{Path-space coordinates}
The chart 	\eqref{ps-cord},
	\begin{equation}
		\begin{pmatrix}
			x_0 \\
			x_1 \\
			x_2 
		\end{pmatrix} 
			= 	\begin{pmatrix}
 						r \sin \theta \cos \psi  \\
						r \sin  \psi  \\
						r \cos \theta \cos \psi  \\
				\end{pmatrix} \; , \qquad 0 \le \theta < 2 \pi  \; , \quad - \frac{ \pi}{2} < \psi <  \frac{ \pi}{2} \; , 
	\end{equation}
covers the sphere with exception of 
the two points $(0, \pm r, 0) \in {\mathbb R}^3$. Refer to this chart as {\em path-space coordinates}\index{path-space coordinates}.
The point $(\theta, \psi)\equiv (0, 0)$ is mapped to the origin $\vec o =(0, 0,r)$.
The restriction of the Euclidean metric to this chart  is
\label{Laplacepage}
	\[
	g =  \cos^2 \psi  \, ({\rm d} \theta \otimes {\rm d} \theta) + {\rm d} \psi \otimes {\rm d} \psi  
	\]
and
	\begin{equation}
	\label{L2}	
	\Delta
		= \frac{1}{r^2 \cos_\psi^2} \left( \frac{\partial^2}{\partial \theta^2}  
		+ \Bigl( \cos_\psi \frac{\partial}{\partial \psi} \Bigr)^2\right) .
	\end{equation}
The surface element on $S^2$ is ${\rm d}\Omega(\theta, \psi ) = r^2 \cos \psi \,{\rm d} \theta {\rm d} \psi$.

\subsection{The Laplace operator}
The expressions in \eqref{L1} and \eqref{L2} both extend to the self-adjoint \emph{Laplace operator}\index{Laplace operator}  
$\Delta_{S^2}$ on $L^2(S^2, {\rm d} \Omega)$.  $- \Delta_{S^2}$~has non-negative
discrete spectrum and an isolated simple eigenvalue at zero with eigenspace the constants. 

\chapter{Gaussian Measures}

\section{The definition of the measure}
\label{GMPATH}

Let $\Sigma$ be the $\sigma$-algebra\index{$\sigma$-algebra} generated by the \emph{Borel cylinder sets}\index{Borel cylinder set} 
of ${\mathcal Q}:={\mathcal D}'_{{\mathbb R}}(S^2)$, the dual space\index{dual space} of $C^\infty_{{\mathbb R}}(S^2)$. 
For $f\in C^\infty_{{\mathbb R}}(S^2)$, let $\Phi(f) \colon  {\mathcal Q}  \to  {\mathbb C}$ be the \emph{evaluation map}\index{evaluation map}  
\label{dualitypage}
	\begin{equation}
		\label{cmap}
			q  \mapsto   \langle q, f \rangle  \; .  
	\end{equation}
$\langle \, . \, , \, . \, \rangle$ is the duality bracket\index{duality bracket}.
For $F$ a Borel function\index{Borel function} on~${\mathbb R}$, define  
	\[
		\begin{matrix}
		 F (\Phi(f)) \colon & {\mathcal Q} & \to & {\mathbb C}  \\
		& q & \mapsto & F \bigl( \langle q, f\rangle \bigr) \; .\\
		\end{matrix} 
	\]
The Fourier transform\index{Fourier transform of measure} of  
a \emph{Gaussian measure}\index{Gaussian measure}  ${\rm d} \Phi_{C}$  on~$({\mathcal Q}, \Sigma)$ with 
\emph{covariance}\index{covariance}	
	\begin{equation}
		\label{c}
		C (f,g)=\bigl\langle \overline{f},(- \Delta_{S^2} +  \mu^{2})^{-1}g \bigr\rangle_{L^{2}( S^2, {\rm d} \Omega)} \;
		,\qquad f,g\in C^\infty(S^2) \; ,
	\end{equation}
is
	\begin{equation}
		\label{e1.00}
			\int_{{\mathcal Q}} {\rm d}\Phi_{C} \; {\rm e}^{i \Phi(f)} 
			=  {\rm e}^{- C (f,f)/2}, \quad f\in C^\infty_{{\mathbb R}}(S^2) \; .
	\end{equation}
According to \emph{Minlos' theorem}\index{Minlos theorem} \cite{GJ}\cite{S}\cite{Bour}, Equ.~\eqref{e1.00}~defines 
a unique probability measure\index{probability measure}, namely ${\rm d}\Phi_{C}$, on~$Q$.
 
The mean of the Gaussian measure~${\rm d}\Phi_{C}$ is zero, and the covariance $C$ coincides with  the second 
momentum.
More generally,  
	\begin{equation}
		\int_{{\mathcal Q}}{\rm d}\Phi_{C} \; \Phi(f)^{p}=
			\left\{
				\begin{array}{l}
					0 \, , \qquad \qquad \qquad \qquad \; \; \,  p\hbox{ odd}\\
					(p-1)!! \, C(f,f)^{p/2},\quad p\hbox{ even}
				\end{array}
			\right. 
			\; .
		\label{e1.0}
	\end{equation}
It follows from (\ref{e1.0}) that 
	\[
		{\rm e}^{\Phi(f)}\in L^{1}({\mathcal Q}, \Sigma, {\rm d}\Phi_{C}) 
		\quad \text{if} \quad f\in C^\infty_{{\mathbb R}}(S^2) \; .
	\]
For $\mu>0$ the covariance (\ref{c})  defines a norm
$\| \, . \, \|_{-1} = C( \, . \, , \, .\, )^{1/2}$ on $C_{{\mathbb R}}^\infty(S^2)$. The completion of 
$C^\infty_{{\mathbb R}} (S^2)$ with respect to $\| \, . \, \|_{-1}$ is the \emph{Sobolev space}\index{Sobolev space} $\mathbb{H}^{-1} (S^2) $. 

\newpage

\begin{proposition}
\label{3.9a}
Let $f, g \in \mathbb{H}^{-1} (S^2)$. It follows that 
	\begin{align*}
		 \langle f, g \rangle_{ \mathbb{H}^{-1} (S^2) } 
		 & =  \left\langle f , (- \Delta_{S^2} +\mu^2)^{-1} g \right\rangle_{L^2( S^2, {\rm d} \Omega)}  
		 \\
		& = \frac{1}{ 2}  \int_{S^2}  {\rm d} \Omega (x)  \int_{S^2}  {\rm d} \Omega (y) \; 
		f(x)  \, c_\nu \, P_{s^+}  
		 \bigl( - \tfrac{  \vec x \cdot \vec y }{r^2} \bigr) \,   g(y)  \; .  
	\end{align*}
\end{proposition}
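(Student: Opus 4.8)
The statement asserts that the Sobolev $\mathbb{H}^{-1}(S^2)$ inner product, defined via the resolvent $(-\Delta_{S^2}+\mu^2)^{-1}$, has integral kernel $\tfrac{1}{2}c_\nu P_{s^+}(-\tfrac{\vec x\cdot\vec y}{r^2})$. The plan is to identify the Green's function of $-\Delta_{S^2}+\mu^2$ on the Euclidean sphere explicitly and recognize it as a Legendre function. First I would recall that $-\Delta_{S^2}$ has pure point spectrum with eigenvalues $\ell(\ell+1)/r^2$, $\ell\in\mathbb{N}_0$, on the spherical harmonics $Y_{\ell m}$. The resolvent kernel $(-\Delta_{S^2}+\mu^2)^{-1}(x,y)$ is therefore rotation-invariant, hence a function of $\vec x\cdot\vec y/r^2 = \cos\gamma$ alone, and by the addition theorem for spherical harmonics equals $\tfrac{1}{4\pi r^2}\sum_{\ell\ge 0}\tfrac{(2\ell+1)}{\ell(\ell+1)/r^2 + \mu^2}P_\ell(\cos\gamma)$. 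The task reduces to summing this series.

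\textbf{Key steps.} The series above is a classical generating-function-type sum. Writing $\mu^2 r^2 = \tfrac14 + m^2 r^2 = -s(s+1)$ with $s=s^+=-\tfrac12 - i\nu$ (so that $\ell(\ell+1)+\mu^2 r^2 = (\ell-s^+)(\ell+s^++1) = (\ell+s^-)(\ell-s^-)$ is a product of shifted linear factors), partial fractions give $\tfrac{1}{\ell(\ell+1)+\mu^2 r^2} = \tfrac{1}{2i\nu}\bigl(\tfrac{1}{\ell+s^-} - \tfrac{1}{\ell-s^-}\bigr)$, reducing the sum to known expansions of Legendre functions in Legendre polynomials. Concretely, I would invoke the Heine-type formula (e.g. \cite[Ch.~7]{Lebedev} or the material quoted around Proposition \ref{legendre} and Remark \ref{Wdomain}), which expresses $P_{s^+}(-\cos\gamma)$ exactly as such a sum with the right coefficients. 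This is precisely the kernel that already appeared in Theorem \ref{prop:4.1} (the two-point function) and in Proposition \ref{thhm}, where the identity $(-\Delta_{S^2}+\mu^2)^{-1}(\vec x,\vec y) = -\tfrac{1}{4\sin(\pi s^+)}P_{s^+}(-\tfrac{\vec x\cdot\vec y}{r^2})$ is stated; using $c_\nu = -\tfrac{1}{2\sin(\pi s^+)}$ this is the same as $\tfrac12 c_\nu P_{s^+}(-\tfrac{\vec x\cdot\vec y}{r^2})$. So the cleanest route is: (i) establish rotation invariance and the spectral sum; (ii) quote the Legendre expansion to identify the sum; (iii) match constants via $c_\nu$; (iv) note the singularity at $\vec x = \vec y$ is integrable (since $P_{s^+}(-\cos\gamma)\sim -\tfrac{1}{\pi}\log\gamma$ as $\gamma\to 0$, by the known logarithmic behaviour of $P_{s^+}$ at $-1$, cf.\ the remark after \eqref{comfkt}), so the double integral converges for $f,g\in\mathbb{H}^{-1}(S^2)$ and extends by density from $C^\infty_{\mathbb R}(S^2)$.

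\textbf{Main obstacle.} The only nontrivial analytic point is the summation of $\sum_\ell \tfrac{2\ell+1}{\ell(\ell+1)+\mu^2 r^2}P_\ell(\cos\gamma)$ in closed form as a Legendre function, together with pinning down the overall normalization constant. If I did not want to rely on a table, I would derive it by noting that $u(\gamma)\doteq (-\Delta_{S^2}+\mu^2)^{-1}(\vec x,\vec y)$ satisfies, away from the diagonal, the ODE obtained by restricting $(-\Delta_{S^2}+\mu^2)u = 0$ to functions of $\cos\gamma$ — namely the Legendre equation $\tfrac{d}{dt}\bigl((1-t^2)\tfrac{du}{dt}\bigr) + s^+(s^++1)u = 0$ with $t=-\cos\gamma$ — selecting the solution regular at $t=+1$ (i.e.\ $\gamma = \pi$, the antipode), which is $P_{s^+}(-\cos\gamma)$ up to a constant, and fixing that constant from the strength of the logarithmic singularity at the diagonal (equivalently, from the jump condition encoding the delta function, which reproduces exactly $-1/(4\sin\pi s^+) = \tfrac12 c_\nu$). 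The reflection-symmetry identity $\overline{P_{s^+}} $-handling and the real-valuedness for the complementary series follow since $P_{s^+}$ is real on $(-1,\infty)$ for $s^+$ real and $P_{s^-}=P_{s^+}$ because $s^-=-1-s^+$. The detailed verification via the cartesian and geographical charts \eqref{L1}–\eqref{L2} is deferred to the proof of Lemma \ref{3.9}, as already indicated in the text.
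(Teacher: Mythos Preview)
Your proposal is correct and follows essentially the same route as the paper: write the resolvent kernel as the spectral sum $\tfrac{1}{4\pi}\sum_\ell \tfrac{2\ell+1}{\ell(\ell+1)+\mu^2 r^2}P_\ell$ via the addition theorem for spherical harmonics, then identify this with $P_{s^+}(-\tfrac{\vec x\cdot\vec y}{r^2})$ up to the constant $\tfrac{1}{4\cos(i\nu\pi)}=\tfrac12 c_\nu$. The only cosmetic difference is that the paper pins down the Legendre-polynomial expansion coefficients of $P_{-\frac12-i\nu}$ by computing $\int_{-1}^1 P_\ell(z)\,P_{-\frac12-i\nu}(z)\,dz$ via a classical identity of Neumann, rather than quoting a Heine-type formula or running your ODE-plus-jump-condition argument.
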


\begin{remark}
This result should be compared with Proposition \ref{legendre}, which says that for 
two functions $f, g \in C^\infty_0 (dS)$ one has 
	\begin{align*}
	 \langle [f],   [g] \, \rangle_{{\mathfrak h} (dS)} 		 
	 & =   \int_{dS} {\rm d} \mu_{dS} 
	( x)  \int_{dS} {\rm d} \mu_{dS} (  y)  \;  { f ( x ) } \, \underbrace{ c_\nu \, 
	P_{s^+} \left(  \tfrac{x_+ \cdot  y_-}{r^2} \right)}_{\mathcal W}(  x_+,  y_-) \, g ( y )  \; .
	\end{align*}
\end{remark}

\begin{proof} We recall that the spherical harmonics\index{spherical harmonics} 
	\begin{equation}
	\label{spherical harmonics}	
		Y_{l,k} (\theta, \psi)= \sqrt{ \frac{2l +1}{4 \pi} \frac{(l-k)!}{(l+k)!} } \; 
		P_l^k (\cos \theta) \,  {\rm e}^{i k \psi} \; , \qquad \theta \in [0, \pi] \; , 
	\end{equation}
with $\theta = 0$ at the north pole, are orthonormal 
	\[
	\int_{S^2} {\rm d} \Omega \, \overline{Y_{l',k'} (\theta, \psi)} Y_{l,k} (\theta, \psi) = r^2 \delta_{l, l'} \delta_{k, k'} \; , 
	\]
and satisfy 
	\[
	 \Delta_{S^2} Y_{l,k} (\theta, \psi)= - \frac{l (l+1)}{r^2} Y_{l,k} (\theta, \psi) \;  .  
	\]
Now consider two vectors $\vec x \equiv \vec x (\theta,\psi)$ and $\vec y \equiv \vec y (\theta',\psi')$ 
of length $ | \vec x | = | \vec y | = r$. It follows that the kernel of the operator
	\begin{align*}
		 (- \Delta_{S^2} + \mu^2)^{-1}  (\vec x, \vec y) & =   \sum_{l=0}^\infty \sum_{k=-l}^l 
		\frac{\overline{Y_{l,k} (\theta', \psi')} Y_{l,k} (\theta, \psi)}{l(l+1) + \mu^2r^2} 
		 \\
		& =   \frac{1} {4 \pi}\sum_{l=0}^\infty 
				\frac{2l+1}{l(l+1) + \mu^2r^2} P_l \bigl( \tfrac{\vec x \cdot \vec y}{r^2} \bigr) 
		 \\
		& =   \frac{1} {4 \pi}\sum_{l=0}^\infty  
				\frac{2l+1}{l(l+1) + \mu^2r^2} 
				(-1)^l   P_l \bigl( - \tfrac{  \vec x \cdot \vec y}{r^2} \bigr) \; . 
	\end{align*}
In the second equality we have used the summation formula \cite[p. 395]{WaWi}: 
	\[
	P_l \bigl( \tfrac{\vec x \cdot \vec y}{r^2} \bigr) = \frac{4 \pi}{2l+1}	\sum_{k=-l}^l 
		\overline{Y_{l,k} (\theta', \psi')} Y_{l,k} (\theta, \psi) \; . 
	\]
In the third equality we have used $	(-1)^l  P_l (z)  =  P_l (-z)$. 

The identity\footnote{This identity extends by analyticity from $\nu \in \mathbb{R}$ to the case
$\nu =i \sqrt{\frac{1}{4} -\mu^2r^2}$,  $ 0< \mu < 1 /2r$.} 
 \cite[Eq.~(23), page 205]{Neumann}
	\begin{align*}
	\int_{-1}^1 {\rm  d} z \; P_l (z) P_{-\frac{1}{2} - i \nu} (z)  & = \frac{2 \cos (i \nu  \pi ) }{\pi}  \frac{(-1)^l}{(l+\frac{1}{2})^2 +  \nu^2} 	\\
	& = \frac{2 \cos (i \nu  \pi ) }{\pi}  \frac{(-1)^l}{l(l + l) +\mu^2r^2}
	\end{align*}
shows that 
	\begin{align*}
	P_{-\frac{1}{2} - i \nu} (z)  
	& = \sum_{l=0}^\infty \left( \frac{2l +1}{2}\int_{-1}^1 {\rm  d} z' \; P_l (z') P_{-\frac{1}{2} - i \nu} (z') ) \right) P_l (z) \\
	& = \frac{ \cos (i \nu  \pi ) }{\pi} \sum_{l=0}^\infty (-1)^l \frac{ 2l +1 }{l(l + l) + \mu^2r^2}  P_l (z) \; .  
	\end{align*}
Thus 
	\begin{align*}
		 (- \Delta_{S^2} + \mu^2)^{-1}  ( \vec x, \vec y) & 
		 =     \frac{P_{-\frac{1}{2} - i \nu}  
		 \bigl( - \frac{  \vec x \cdot \vec y }{r^2} \bigr)   } {4 \cos (i \nu \pi )}  \; .  
	\end{align*}
\end{proof}

Next use  (\ref{c}) and (\ref{e1.0}) to show  the following result. 

\begin{lemma}
The map
	\begin{equation}
		\label{e1.6bb}
			\begin{matrix}
				& C^\infty_{{\mathbb R}}  (S^2) & \to  & \bigcap_{1\leq
				p<\infty}L^{p}({\mathcal Q}, \Sigma, {\rm d}\Phi_{C}) \\
				& f & \mapsto & \Phi(f) \\
			\end{matrix}  \,  
	\end{equation}
extends to a  continuous map from $\mathbb{H}^{-1} (S^2)$ to 
$\bigcap_{1\leq p<\infty}L^{p}({\mathcal Q}, \Sigma, {\rm d}\Phi_{C})$. 
\end{lemma}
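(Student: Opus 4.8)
The plan is to exploit the explicit Gaussian moment formula \eqref{e1.0}, which already encodes the crucial fact that $\|f\|_{-1}$ is (up to a constant) the $L^p(\mathcal{Q},\Sigma,{\rm d}\Phi_C)$-norm of $\Phi(f)$. From this one shows that $f\mapsto\Phi(f)$ is a bounded linear map from $\bigl(C^\infty_{\mathbb{R}}(S^2),\|\cdot\|_{-1}\bigr)$ into $L^p$ for each finite $p$, and then extends it by density.

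First I would note that by \eqref{e1.0} with $p=2n$ and by \eqref{c},
	\[
		\int_{\mathcal{Q}} {\rm d}\Phi_C \; \Phi(f)^{2n} = (2n-1)!!\; C(f,f)^n = (2n-1)!!\; \|f\|_{-1}^{2n} \; ,
	\]
so that $\|\Phi(f)\|_{L^{2n}} = \bigl((2n-1)!!\bigr)^{1/2n}\,\|f\|_{-1}$. Since ${\rm d}\Phi_C$ is a probability measure, one has $\|\,\cdot\,\|_{L^p}\le\|\,\cdot\,\|_{L^q}$ whenever $p\le q$; taking $q=2n$ with $2n=2\lceil p/2\rceil$ the smallest even integer $\ge p$, we obtain
	\[
		\|\Phi(f)\|_{L^p} \le c_p\,\|f\|_{-1}\; , \qquad c_p \doteq \bigl((2\lceil p/2\rceil-1)!!\bigr)^{1/(2\lceil p/2\rceil)} < \infty \; .
	\]
Linearity of $f\mapsto\Phi(f)$ is immediate from the definition \eqref{cmap}.

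Next, since $\mathbb{H}^{-1}(S^2)$ is \emph{by definition} the completion of $C^\infty_{\mathbb{R}}(S^2)$ in the norm $\|\cdot\|_{-1}$, the space $C^\infty_{\mathbb{R}}(S^2)$ is dense in it, and the bounded linear map above extends uniquely to a bounded linear map $\Phi_p\colon\mathbb{H}^{-1}(S^2)\to L^p(\mathcal{Q},\Sigma,{\rm d}\Phi_C)$ for each fixed $p\in[1,\infty)$, still satisfying $\|\Phi_p(f)\|_{L^p}\le c_p\|f\|_{-1}$. To see that these per-$p$ extensions assemble into a single element of $\bigcap_{1\le p<\infty}L^p$, take $f_k\to f$ in $\mathbb{H}^{-1}(S^2)$ with $f_k\in C^\infty_{\mathbb{R}}(S^2)$; then $(\Phi(f_k))_k$ is Cauchy simultaneously in every $L^p$, and $L^p$-convergence implies convergence in ${\rm d}\Phi_C$-measure, so the limits $\Phi_p(f)$ all coincide ${\rm d}\Phi_C$-almost everywhere. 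This common representative $\Phi(f)$ lies in $\bigcap_{1\le p<\infty}L^p(\mathcal{Q},\Sigma,{\rm d}\Phi_C)$, and the map $f\mapsto\Phi(f)$ is continuous into the projective-limit topology on that intersection precisely because it is continuous into each $L^p$ with the uniform bound $c_p\|f\|_{-1}$.

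The main obstacle here is essentially bookkeeping rather than substance: one must check that the constants $c_p$ are finite for every $p$ (this is exactly the $L^q\hookrightarrow L^p$ step on a probability space) and that the inclusions $L^q\hookrightarrow L^p$ intertwine the extensions $\Phi_q$ and $\Phi_p$, so that they glue to one almost-everywhere-defined function. The genuine analytic input --- that the $\mathbb{H}^{-1}(S^2)$-norm of $f$ is the second moment of the Gaussian variable $\Phi(f)$ --- has already been supplied by \eqref{c} and \eqref{e1.0}, so no further estimate is needed.
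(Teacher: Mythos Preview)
Your proof is correct and follows exactly the approach the paper indicates: the paper merely writes ``Next use (\ref{c}) and (\ref{e1.0}) to show the following result'' without giving any further details, and your argument --- computing even moments via \eqref{e1.0}, bounding odd $L^p$-norms by the nearest even one via the probability-measure inclusion, and extending by density --- is precisely the intended elaboration of that hint.
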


\begin{lemma}
The cylindrical functions $F \bigl(\Phi(f_{1}), \dots, \Phi(f_{n}) \bigr)$, $f_{i}\in
C^\infty_{{\mathbb R}}(S^2)$, $F$ a Borel function on~${\mathbb R}^{n}$ and $n\in
{\mathbb N}$, are dense
in $L^{p}({\mathcal Q}, \Sigma, {\rm d}\Phi_{C})$ for  $1\leq p<\infty$. 
\end{lemma}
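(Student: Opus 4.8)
The plan is to combine a monotone-class (Dynkin) argument with the density of simple functions in $L^p$ of a probability space. First I would isolate the \emph{bounded} cylindrical functions: let $\mathcal{A}$ be the set of functions $F(\Phi(f_1),\dots,\Phi(f_n))$ with $n\in\mathbb{N}$, $f_i\in C^\infty_{\mathbb{R}}(S^2)$ and $F$ a \emph{bounded} Borel function on $\mathbb{R}^n$. Then $\mathcal{A}$ is a $*$-algebra containing the constants: given two such functions, one pulls both back along the coordinate projections $\mathbb{R}^{n+m}\to\mathbb{R}^{n}$ and $\mathbb{R}^{n+m}\to\mathbb{R}^{m}$ over the combined list of test functions, so that sums, products and complex conjugates are again of this form. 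Moreover $\mathcal{A}\subset L^p({\mathcal Q},\Sigma,{\rm d}\Phi_C)$ for every $1\le p<\infty$, since ${\rm d}\Phi_C$ is a probability measure and the elements of $\mathcal{A}$ are bounded. Write $\overline{\mathcal{A}}$ for the closure of $\mathcal{A}$ in $L^p$.

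Next I would show $\mathbb{1}_A\in\overline{\mathcal{A}}$ for every $A\in\Sigma$. Let $\mathcal{C}$ be the collection of Borel cylinder sets, i.e.\ sets $\{q\in{\mathcal Q}\mid(\langle q,f_1\rangle,\dots,\langle q,f_n\rangle)\in B\}$ with $B\subset\mathbb{R}^n$ Borel and $f_i\in C^\infty_{\mathbb{R}}(S^2)$. This is a $\pi$-system — the intersection of two cylinder sets is a cylinder set over the combined list of test functions — and by the definition of $\Sigma$ it generates $\Sigma$. Put $\mathcal{D}=\{A\in\Sigma\mid\mathbb{1}_A\in\overline{\mathcal{A}}\}$. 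Then ${\mathcal Q}\in\mathcal{D}$; $\mathcal{D}$ is closed under complements because $\mathbb{1}_{A^c}=1-\mathbb{1}_A$ and $1\in\mathcal{A}$; and $\mathcal{D}$ is closed under countable disjoint unions because the partial sums $\sum_{k\le N}\mathbb{1}_{A_k}$ lie in $\mathcal{A}$ and converge in $L^p$ to $\mathbb{1}_{\bigcup_k A_k}$ by dominated convergence (the constant $1$ is ${\rm d}\Phi_C$-integrable). Hence $\mathcal{D}$ is a $\lambda$-system; and $\mathcal{C}\subset\mathcal{D}$, since for a cylinder set $A$ as above one has $\mathbb{1}_A=(\mathbb{1}_B)(\Phi(f_1),\dots,\Phi(f_n))\in\mathcal{A}$. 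By Dynkin's $\pi$–$\lambda$ theorem, $\mathcal{D}\supseteq\sigma(\mathcal{C})=\Sigma$.

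Finally, since ${\rm d}\Phi_C$ is finite, the simple functions $\sum_j c_j\mathbb{1}_{A_j}$ with $A_j\in\Sigma$ are dense in $L^p({\mathcal Q},\Sigma,{\rm d}\Phi_C)$ for $1\le p<\infty$; by the previous step each of them lies in $\overline{\mathcal{A}}$, so $\overline{\mathcal{A}}=L^p({\mathcal Q},\Sigma,{\rm d}\Phi_C)$. As $\mathcal{A}$ is contained in the set of cylindrical functions appearing in the statement, the claim follows. The only genuinely delicate point is the bookkeeping in the first paragraph — verifying that $\mathcal{A}$ is really closed under the algebraic operations when the underlying test-function lists differ — but this is routine. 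An alternative route, which I would mention only as a remark, is to show instead that the linear span of the exponentials $\mathrm{e}^{i\Phi(f)}$, $f\in C^\infty_{\mathbb{R}}(S^2)$, is dense — first in $L^2$ using the characteristic-functional identity \eqref{e1.00}, then in $L^p$ by a truncation argument — but the Dynkin approach is cleaner and uniform in $p$.
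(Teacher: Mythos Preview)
The paper states this lemma without proof --- it is treated as a standard fact about Gaussian measures on distribution spaces --- so there is no argument to compare against. Your Dynkin/monotone-class proof is correct and is the natural way to establish the result from scratch.

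One tiny slip: in the verification that $\mathcal{D}$ is closed under countable disjoint unions you write that the partial sums $\sum_{k\le N}\mathbb{1}_{A_k}$ lie in $\mathcal{A}$. At that point the $A_k$ are only assumed to be in $\mathcal{D}$, so each $\mathbb{1}_{A_k}$ lies in $\overline{\mathcal{A}}$, not necessarily in $\mathcal{A}$; hence the partial sums lie in the closed linear subspace $\overline{\mathcal{A}}$. The dominated-convergence step then gives $\mathbb{1}_{\bigcup_k A_k}\in\overline{\mathcal{A}}$ as you intend. This does not affect the validity of the argument.
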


\section{Properties of the covariance}

The short distance behaviour of the covariance $C$ has been studied in \cite{AFG}. 
We briefly present their findings. The covariance $C$ can be expressed \cite{wald} 
in terms of the heat kernel $p(t, x, y)$ on the sphere~$S^2$ : 
	\[
		C (x, y) = \int_0^\infty {\rm d}t \; {\rm e}^{-t \mu^2} p(t, x, y)
	\]
This allows us to introduce the usual multi-scale decomposition (see, \emph{e.g.}, \cite{Benfatto}). The covariance C(x,y) is given by
	\[
		C(x,y)= \sum_{l=0}^\infty C_l (x,y) \; , 
	\]
where, for some fixed constant $\gamma$ larger than $1$,
	\[
		C_l (x,y) = \int_0^\infty {\rm d}t \; \left( {\rm e}^{-t \gamma^{2l} \mu^2} - {\rm e}^{-t \gamma^{2l+2} \mu^2}  \right) p(t, x, y)
	\]
is the kernel of the operator
	\[
		C_l (x,y) = \frac{1}{- \Delta_{S^2} +  \mu^{2}\gamma^{2l}  } - \frac{1}{- \Delta_{S^2} +  \mu^{2}\gamma^{2l+2} }
	\]
Following \cite{AFG}, we introduce the regularized covariance  
	\[
		C^{(k)} (x,y)= \sum_{l=0}^{\log_\gamma k - 1} C_l (x,y) \; , 
	\]
where, of course, $k$ is such that $\log_\gamma k $ ranges on the positive integers. 
$C^{(k)}$ represents the covariance of the field with length cutoff $\gamma (\mu k)^{-1}$, the analog, in the flat case, 
of a momentum cutoff of order $ \mu k$. In this sense, $C^{(k)}$  
compares with the~$\delta_k C \delta_k$ in the equations 
LR1, LR2, and LR3 of \cite[p.~160--161]{GJ}.

\begin{theorem}[De Angelis, de Falco and Di Genova \cite{AFG}]
\label{c-log}
Let $1 \le q < \infty$. With the notation introduced above,  we have   
	\begin{align*}
		\sup_{x \in S^2} \| C( x, \, . \, ) \|_{L^q(S^2, {\rm d} \Omega)}  & < + \infty \; , 
		\\
		\| C^{(k)} ( \, . \, , \, . \, ) - C ( \, . \, , \, . \, ) \|_{L^q(S^2 \times S^2, {\rm d} \Omega \otimes {\rm d} \Omega )}& \le O(k^{-2/q} ) \; , 
		\\
		\sup_{x \in S^2}  C^{(k)} ( x, x )  & \le O( \log_\gamma k) \; .  
	\end{align*}
\end{theorem}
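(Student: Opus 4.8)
The plan is to reduce all three estimates to a single input: a heat-kernel bound on $S^2$ of the form $p(t,x,y) \le O(t^{-1})$ for $0 < t \le 1$, together with the standard short-time Gaussian off-diagonal decay and the obvious bound $p(t,x,y) \le O(1)$ for $t \ge 1$ coming from the spectral gap of $-\Delta_{S^2}$. These are classical facts for a compact two-dimensional Riemannian manifold, and I would cite them rather than reprove them. Given the representation
\[
	C(x,y) = \int_0^\infty {\rm d}t \; {\rm e}^{-t\mu^2} p(t,x,y) \; ,
\]
the logarithmic singularity of $C$ on the diagonal in dimension two is visible: $\int_0^1 t^{-1}\,{\rm e}^{-t\mu^2}\,{\rm d}t$ diverges logarithmically only at $t=0$, but once $x \ne y$ the Gaussian factor ${\rm e}^{-d(x,y)^2/4t}$ cuts the integral off at scale $t \sim d(x,y)^2$, leaving $C(x,y) = O(\log d(x,y)^{-1})$, which is in $L^q$ for every $q<\infty$ locally; away from the diagonal $C$ is bounded. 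This proves the first inequality after integrating the $q$-th power over the compact manifold $S^2$, uniformly in $x$ by homogeneity of the estimate.

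\textbf{The cutoff covariance.} For the second estimate I would use that $C^{(k)} - C = \sum_{l \ge \log_\gamma k} C_l$, and by telescoping this is precisely the kernel of $(-\Delta_{S^2} + \mu^2 k^2)^{-1}$ with $\gamma$ absorbed into constants; equivalently, in the heat-kernel picture,
\[
	C^{(k)}(x,y) - C(x,y) = - \int_0^\infty {\rm d}t \; {\rm e}^{-t\mu^2 k^2} p(t,x,y) \; .
\]
Rescaling $t \mapsto t/k^2$, the effective mass is pushed to $k$, and the same dimension-two analysis gives a kernel which is $O(\log k)$ on the diagonal but whose $L^q(S^2 \times S^2)$ norm is controlled by the mass: since $\| (-\Delta_{S^2}+\mu^2 k^2)^{-1} \delta_x \|_{L^q}$ scales like $k^{-2+2/q}$ up to logarithms for $q>1$, and for the double integral one gains an extra factor, a careful bookkeeping of the scaling exponents yields the claimed $O(k^{-2/q})$. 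I would carry this out by splitting the $t$-integral at $t = k^{-2}$, using the $t^{-1}$ bound and Gaussian decay below the cutoff and the spectral-gap exponential bound above it, then applying Minkowski's integral inequality to pull the $L^q$ norm inside the $t$-integral.

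\textbf{The diagonal of the cutoff.} The third estimate, $\sup_x C^{(k)}(x,x) \le O(\log_\gamma k)$, follows directly from the heat-kernel representation of $C^{(k)}$: one has $C^{(k)}(x,x) = \int_0^\infty {\rm e}^{-t\mu^2} p(t,x,x)\,{\rm d}t - \int_0^\infty {\rm e}^{-t\mu^2 k^2} p(t,x,x)\,{\rm d}t$, i.e. $C^{(k)}(x,x) = \int_0^\infty (\,{\rm e}^{-t\mu^2} - {\rm e}^{-t\mu^2 k^2}\,) p(t,x,x)\,{\rm d}t$; the integrand is supported effectively on $k^{-2} \lesssim t \lesssim 1$, where $p(t,x,x) \le O(t^{-1})$, so the integral is bounded by $\int_{k^{-2}}^{1} t^{-1}\,{\rm d}t + O(1) = 2\log k + O(1)$, uniformly in $x$ by compactness.

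\textbf{Main obstacle.} The routine parts are the heat-kernel estimates and the bookkeeping; the one point demanding genuine care is the sharp exponent $2/q$ in the second inequality, which requires the Gaussian off-diagonal bound (not merely the diagonal $t^{-1}$ bound) to be exploited when integrating the $q$-th power of the kernel over \emph{both} variables — a crude estimate would lose the factor. I would handle this by changing variables to geodesic polar coordinates around one of the points, carrying out the angular and radial integration against ${\rm e}^{-d^2/4t}$ explicitly, and tracking how each power of $k$ enters; the fact that $S^2$ is two-dimensional and has bounded geometry makes these local computations reduce to the Euclidean model, and the global contribution is harmless because of the mass $\mu > 0$ and the spectral gap. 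For full rigor one should note the authors attribute this to \cite{AFG}, so citing that reference for the delicate exponent is legitimate if the direct argument becomes lengthy.
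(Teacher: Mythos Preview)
The paper does not prove this theorem; it is stated with attribution to \cite{AFG} and followed only by a Remark explaining the logarithmic short-distance singularity via the small-$t$ asymptotics of the heat kernel. Your heat-kernel approach is therefore not competing with any argument in the paper --- it is the natural strategy, and it matches the hint the Remark provides.

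Your reductions are correct. The telescoping identification $C^{(k)} - C = -\bigl(-\Delta_{S^2} + \mu^2 k^2\bigr)^{-1}$ (using $\gamma^{2\log_\gamma k} = k^2$) is exactly right, and your treatment of the diagonal bound $C^{(k)}(x,x) = \int_0^\infty \bigl({\rm e}^{-t\mu^2} - {\rm e}^{-t\mu^2 k^2}\bigr) p(t,x,x)\,{\rm d}t$ by localising to $t \in [k^{-2},1]$ and using $p(t,x,x) = O(t^{-1})$ is clean and complete.

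One slip to fix: in the second estimate you assert that $\|(-\Delta_{S^2}+\mu^2 k^2)^{-1}\delta_x\|_{L^q}$ scales like $k^{-2+2/q}$. In two dimensions the massive Green's function obeys $G_{M}(x,\cdot) = G_1(M\,\cdot\,)$ (no prefactor, since $d-2=0$), so $\|G_M(x,\cdot)\|_{L^q}^q = M^{-2}\|G_1\|_{L^q}^q$ and hence $\|G_M(x,\cdot)\|_{L^q} \sim M^{-2/q}$, not $M^{-2+2/q}$. There is also no ``extra factor from the double integral'': by rotational homogeneity,
\[
\|C^{(k)}-C\|_{L^q(S^2\times S^2)}^q \;=\; |S^2|\,\|G_{\mu k}(x_0,\cdot)\|_{L^q(S^2)}^q \;\sim\; k^{-2},
\]
giving the claimed $O(k^{-2/q})$ directly. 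So the exponent falls out of a single rescaling; no delicate bookkeeping or off-diagonal Gaussian bound is needed for this particular estimate (the Gaussian decay is what makes $\|G_1\|_{L^q}$ finite, but that is a fixed constant). With this correction your sketch is a valid proof outline, and citing \cite{AFG} for details --- as the paper itself does --- is entirely appropriate.
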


\begin{remark}
The logarithmic nature of the singularity of the covariance
$C(x, y)$ at coinciding points 
	\[
		C(x,y) \sim \frac{1}{2\pi} \log  \mu d(x,y) 
	\]
follows from the asymptotic behaviour of the heat kernel as $t \downarrow 0$ \cite{Bellaiche}:
	\[
		p(t, x,y) \sim \frac{1} {4 \pi t} H(x, y) {\rm e}^{- \frac{d^2(x,y)}{4t}}
	\]
uniformly on all compact sets in $S^2 \times S^2$,  which do not intersect the  cut locus of~$S^2$. 
Here $d(x,y) $ is the geodesic distance and $H(x,y)$ the Ruse invariant (see, \emph{e.g.}, \cite{Walker}). 
The latter is equal to one in the case of constant curvature. \end{remark}

\section{Sobolev spaces}
\label{sec:3.1}
In \cite{D} Dimock defines real
Sobolev spaces over a Riemannian manifold $({\mathcal M},  g)$, with ${\mathcal M}$ an oriented compact connected finite-dimensional manifold 
and $g$ a positive definite metric. The Euclidean sphere  $S^2$ is a special case:
define for  $m >0$ fixed, the spaces $\mathbb{H}^{1} (S^2)$ as the completion of $C_{\mathbb R}^\infty (S^2)$ in the norm
\label{sobolevpage}
	\[
		\| h \|^2_{1} 
			= \left\langle h, (- \Delta_{S^2} +\mu^2)h \right\rangle_{L^2( S^2, {\rm d} \Omega)} \;  . 
	\]
The function spaces $\mathbb{H}^{\pm 1} (S^2)$ are real Hilbert spaces, $ | (f,g) | \le \| f \|_1 \| g \|_{-1} \; $, and
	\[
		C_{\mathbb R}^\infty (S^2) 
			\subset \mathbb{H}^1(S^2) \subset L^2_{\mathbb R} (S^2, {\rm d} \Omega) 
				\subset \mathbb{H}^{-1} (S^2)  \; . 
	\]
The inner product extends to a bilinear pairing of $\mathbb{H}^1(S^2)$ and $\mathbb{H}^{-1}(S^2)$. 
In fact, $\mathbb{H}^1(S^2)$ and $\mathbb{H}^{-1}(S^2)$ 
are dual to each other with respect to this pairing,  and the map
$f \mapsto (- \Delta_{S^2} + \mu^2)f$ is unitary from $\mathbb{H}^1 (S^2)$ to $\mathbb{H}^{-1}(S^2)$.

For a closed subset $C \subset S^2$, which contains an open subset of $S^2$, 
we define a closed subspace $\mathbb{H}^{-1}_{\upharpoonright C} (S^2)$ of $\mathbb{H}^{-1}(S^2)$:
	\begin{equation}
		\label{square}
		\mathbb{H}^{-1}_{\upharpoonright C} (S^2) = \{ f \in \mathbb{H}^{-1}(S^2) \mid {\rm supp\,} f \subset C \}   \; . 
	\end{equation}
For the open sets $S_\pm \subset S^2 $, let $\mathbb{H}^1_0(S_\pm)$ be the closure of 
$C^\infty_{0 {\mathbb R}} (S_\pm)$ in~$\mathbb{H}^1 (S^2)$. 

\begin {lemma}[Dimock \cite{D}, Lemma 1, p.~245] 
\label{detheo}  
	\begin{align*}
		\mathbb{H}^{-1} (S^2)  &
		= \mathbb{H}^{-1}_{\upharpoonright \overline{S_\mp}} (S^2) \oplus (- \Delta_{S^2} + \mu^2) \mathbb{H}^1_0 (S_\pm) \; , 
		\nonumber \\ 
		\mathbb{H}^{-1} (S^2) &
		= (- \Delta_{S^2} + \mu^2) \mathbb{H}_0^{1}(S_-) \oplus \mathbb{H}^{-1}_{\upharpoonright S^1} (S^2) \oplus 
			(- \Delta_{S^2} + \mu^2) \mathbb{H}^1_0 (S_+) \; . 
	\end{align*} 
\end{lemma}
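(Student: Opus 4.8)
The plan is to obtain both identities from the unitarity of $-\Delta_{S^2}+\mu^2 \colon \mathbb{H}^1(S^2)\to\mathbb{H}^{-1}(S^2)$ together with an elementary orthogonal decomposition of $\mathbb{H}^1(S^2)$ into a subspace of functions vanishing on one hemisphere and its complement. First I would record the key duality fact stated just before the lemma: the pairing of $\mathbb{H}^1(S^2)$ and $\mathbb{H}^{-1}(S^2)$ induced by $\langle\,\cdot\,,\,\cdot\,\rangle_{L^2}$ is nondegenerate, and $A\doteq -\Delta_{S^2}+\mu^2$ is a unitary isomorphism from $\mathbb{H}^1(S^2)$ onto $\mathbb{H}^{-1}(S^2)$ carrying the $\mathbb{H}^1$-inner product to the $\mathbb{H}^{-1}$-inner product. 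Consequently, a subspace $V\subset\mathbb{H}^1(S^2)$ and its $\mathbb{H}^1$-orthogonal complement $V^\perp$ are mapped to $AV$ and $AV^\perp$, and these are $\mathbb{H}^{-1}$-orthogonal with $AV\oplus AV^\perp=\mathbb{H}^{-1}(S^2)$.

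The main step is to identify the relevant orthogonal complements inside $\mathbb{H}^1(S^2)$. Take $V=\mathbb{H}^1_0(S_\pm)$, the closure of $C^\infty_{0\mathbb{R}}(S_\pm)$ in $\mathbb{H}^1(S^2)$. I claim that the $\mathbb{H}^1$-orthogonal complement of $\mathbb{H}^1_0(S_+)$ consists precisely of those $h\in\mathbb{H}^1(S^2)$ for which $Ah$ is supported in $\overline{S_-}$; equivalently $A\bigl(\mathbb{H}^1_0(S_+)^\perp\bigr)=\mathbb{H}^{-1}_{\upharpoonright\overline{S_-}}(S^2)$. Indeed, $h\perp\mathbb{H}^1_0(S_+)$ in $\mathbb{H}^1$ means $\langle\varphi,Ah\rangle_{L^2}=\langle\varphi,h\rangle_{\mathbb{H}^1}=0$ for all $\varphi\in C^\infty_{0\mathbb{R}}(S_+)$, which is exactly the statement that the distribution $Ah\in\mathbb{H}^{-1}(S^2)$ vanishes on the open set $S_+$, i.e.\ ${\rm supp\,}(Ah)\subset S^2\setminus S_+=\overline{S_-}$. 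Applying the unitary $A$ to $\mathbb{H}^1(S^2)=\mathbb{H}^1_0(S_+)\oplus\mathbb{H}^1_0(S_+)^\perp$ then yields the first identity,
	\[
		\mathbb{H}^{-1}(S^2)=\mathbb{H}^{-1}_{\upharpoonright\overline{S_-}}(S^2)\oplus A\,\mathbb{H}^1_0(S_+)\; ,
	\]
and the symmetric choice $V=\mathbb{H}^1_0(S_-)$ (or applying the time reflection $T_*$, which is a unitary involution on both Sobolev spaces) gives the version with $\overline{S_+}$.

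For the second, finer decomposition I would apply the first one twice. Write $\mathbb{H}^{-1}(S^2)=\mathbb{H}^{-1}_{\upharpoonright\overline{S_-}}(S^2)\oplus A\,\mathbb{H}^1_0(S_+)$ and then decompose the summand $\mathbb{H}^{-1}_{\upharpoonright\overline{S_-}}(S^2)$ further. The point is that $\mathbb{H}^{-1}_{\upharpoonright\overline{S_-}}(S^2)=\mathbb{H}^{-1}_{\upharpoonright S^1}(S^2)\oplus A\,\mathbb{H}^1_0(S_-)$: distributions supported in the closed lower hemisphere split into those supported on the equator $S^1=\partial S_-$ and the image under $A$ of $\mathbb{H}^1$-functions vanishing on $\overline{S_+}$ (which is $\mathbb{H}^1_0(S_-)$, using that $S_-$ is the interior of $\overline{S_-}$ and that $A$ is local). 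Concretely, intersecting the first identity, applied with roles of $S_+$ and $S_-$ interchanged, with the subspace $\mathbb{H}^{-1}_{\upharpoonright\overline{S_-}}(S^2)$ — and noting $A\,\mathbb{H}^1_0(S_+)\cap\mathbb{H}^{-1}_{\upharpoonright\overline{S_-}}(S^2)=\{0\}$ and $\mathbb{H}^{-1}_{\upharpoonright\overline{S_+}}(S^2)\cap\mathbb{H}^{-1}_{\upharpoonright\overline{S_-}}(S^2)=\mathbb{H}^{-1}_{\upharpoonright S^1}(S^2)$ — produces
	\[
		\mathbb{H}^{-1}_{\upharpoonright\overline{S_-}}(S^2)= A\,\mathbb{H}^1_0(S_-)\oplus\mathbb{H}^{-1}_{\upharpoonright S^1}(S^2)\; ,
	\]
and combining the two displays gives the three-fold decomposition. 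I expect the main obstacle to be the careful verification that the three listed pieces are genuinely $\mathbb{H}^{-1}$-orthogonal and that their sum is all of $\mathbb{H}^{-1}(S^2)$ — in particular the claim that a distribution in $\mathbb{H}^{-1}$ supported on the measure-zero equator $S^1$ is $\mathbb{H}^{-1}$-orthogonal to $A\,\mathbb{H}^1_0(S_\pm)$, which amounts to: if $h\in\mathbb{H}^1_0(S_+)$ then its trace and normal data on $S^1$ vanish in the appropriate sense, so that pairing against an $S^1$-supported element of $\mathbb{H}^{-1}$ gives zero. This is exactly the content of \cite[Lemma~1]{D}, and I would cite the trace-theory estimates there rather than reprove them; everything else is the bookkeeping of unitary images of orthogonal complements sketched above.
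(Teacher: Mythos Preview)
The paper does not give its own proof of this lemma; it is stated as a citation of Dimock \cite{D}, Lemma~1, and used without argument. Your proposal is correct and is essentially the standard argument (and presumably Dimock's): use the unitarity of $A=-\Delta_{S^2}+\mu^2\colon\mathbb{H}^1\to\mathbb{H}^{-1}$, identify $\mathbb{H}^1_0(S_+)^\perp$ via the $L^2$-pairing as $A^{-1}\mathbb{H}^{-1}_{\upharpoonright\overline{S_-}}(S^2)$, and iterate.

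One remark: your closing worry about needing trace theory is unnecessary. The orthogonality of $\mathbb{H}^{-1}_{\upharpoonright S^1}(S^2)$ to $A\,\mathbb{H}^1_0(S_\pm)$ is already contained in the first decomposition, since $\mathbb{H}^{-1}_{\upharpoonright S^1}(S^2)\subset\mathbb{H}^{-1}_{\upharpoonright\overline{S_\mp}}(S^2)$ and the latter is orthogonal to $A\,\mathbb{H}^1_0(S_\pm)$. Likewise the spanning follows cleanly: since $A\,\mathbb{H}^1_0(S_-)\subset\mathbb{H}^{-1}_{\upharpoonright\overline{S_-}}(S^2)$ by locality of $A$, its orthogonal complement inside $\mathbb{H}^{-1}_{\upharpoonright\overline{S_-}}(S^2)$ is $\mathbb{H}^{-1}_{\upharpoonright\overline{S_+}}(S^2)\cap\mathbb{H}^{-1}_{\upharpoonright\overline{S_-}}(S^2)=\mathbb{H}^{-1}_{\upharpoonright S^1}(S^2)$, the last equality being just the definition of support. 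So the three-fold decomposition follows from two applications of the first identity plus locality of $A$, with no appeal to traces of $\mathbb{H}^1_0$-functions.
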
 

\smallskip
Let $e ({S^1})$ and $e \left(\overline{S_\pm}\right)$ denote the orthogonal projections from 
$\mathbb{H}^{-1}(S^2)$ onto $\mathbb{H}^{-1}_{\upharpoonright S^1} (S^2)$
and $\mathbb{H}^{-1}_{\upharpoonright \overline{S_\mp}} (S^2)$, respectively. 

\begin{lemma}[Dimock's pre-Markov property\index{pre-Markov property} \cite{D}, Lemma 2, p.~246] 
\label{dlemma} 
	\[
			e \left({\overline{S_\mp}}\right)  e \left({\overline{S_\pm}}\right) = e( S^1 )
			\qquad \text{on $\mathbb{H}^{-1} \left( S^2 \right)$.}
	\]
Thus $\mathbb{H}^{-1}_{\upharpoonright S^1} (S^2) =   \mathbb{H}^{-1}_{\upharpoonright \overline{S_+}} (S^2) 
\cap   \mathbb{H}^{-1}_{\upharpoonright \overline{S_-}} (S^2)$.
\end{lemma}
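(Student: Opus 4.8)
The plan is to follow Dimock \cite{D} and derive the operator identity $e(\overline{S_\mp})\, e(\overline{S_\pm}) = e(S^1)$ on $\mathbb{H}^{-1}(S^2)$ as a consequence of the two orthogonal decompositions recorded in Lemma~\ref{detheo}. First I would fix notation: write $\mathbb{H}^{-1}_{\overline{S_\pm}}$, $\mathbb{H}^{-1}_{S^1}$ for the relevant closed subspaces of $\mathbb{H}^{-1}(S^2)$ and recall that $e(\overline{S_\pm})$, $e(S^1)$ are the orthogonal projections onto them. The first decomposition in Lemma~\ref{detheo}, namely
\[
\mathbb{H}^{-1}(S^2) = \mathbb{H}^{-1}_{\overline{S_\mp}} \oplus (-\Delta_{S^2}+\mu^2)\mathbb{H}^1_0(S_\pm)\,,
\]
identifies the range of the complementary projection $\mathbb{1}-e(\overline{S_\mp})$ with $(-\Delta_{S^2}+\mu^2)\mathbb{H}^1_0(S_\pm)$; this is the key structural input.

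The main step is to show $e(\overline{S_-})\,e(\overline{S_+}) = e(S^1)$ by a double inclusion on ranges, or equivalently by checking the identity on the three summands of the second (finer) decomposition
\[
\mathbb{H}^{-1}(S^2) = (-\Delta_{S^2}+\mu^2)\mathbb{H}^1_0(S_-) \;\oplus\; \mathbb{H}^{-1}_{S^1} \;\oplus\; (-\Delta_{S^2}+\mu^2)\mathbb{H}^1_0(S_+)\,.
\]
On $\mathbb{H}^{-1}_{S^1}$ both sides act as the identity: a distribution supported on $S^1$ lies in $\mathbb{H}^{-1}_{\overline{S_+}}\cap\mathbb{H}^{-1}_{\overline{S_-}}$, so it is fixed by $e(\overline{S_+})$ and by $e(\overline{S_-})$, hence by the product, and of course by $e(S^1)$. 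On $(-\Delta_{S^2}+\mu^2)\mathbb{H}^1_0(S_-)$ one has $e(S^1)=0$ since this summand is orthogonal to $\mathbb{H}^{-1}_{S^1}$; and since $(-\Delta_{S^2}+\mu^2)\mathbb{H}^1_0(S_-)$ is exactly the orthogonal complement of $\mathbb{H}^{-1}_{\overline{S_+}}$ by the first decomposition (with the lower/upper choice $\mp = +$), we get $e(\overline{S_+})=0$ there, hence $e(\overline{S_-})e(\overline{S_+})=0$. The summand $(-\Delta_{S^2}+\mu^2)\mathbb{H}^1_0(S_+)$ is handled symmetrically, using the first decomposition with the opposite choice of sign to see that it is the orthogonal complement of $\mathbb{H}^{-1}_{\overline{S_-}}$, so $e(\overline{S_-})=0$ on it, and again the product vanishes. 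Since the three summands span $\mathbb{H}^{-1}(S^2)$ and the two operators $e(\overline{S_-})e(\overline{S_+})$ and $e(S^1)$ agree on each, they are equal.

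The final sentence, $\mathbb{H}^{-1}_{S^1} = \mathbb{H}^{-1}_{\overline{S_+}}\cap\mathbb{H}^{-1}_{\overline{S_-}}$, then follows formally: for two orthogonal projections $e_1,e_2$ whose product $e_1e_2$ is again an orthogonal projection (here equal to $e(S^1)$), the range of $e_1e_2$ is $\operatorname{ran}e_1\cap\operatorname{ran}e_2$; indeed $e_1e_2 = e(S^1)$ self-adjoint forces $e_1e_2=e_2e_1$, and for commuting projections $e_1e_2$ projects onto $\operatorname{ran}e_1\cap\operatorname{ran}e_2$. I expect the only genuine subtlety — the ``hard part'' — to be the verification that $e(\overline{S_+})$ annihilates $(-\Delta_{S^2}+\mu^2)\mathbb{H}^1_0(S_-)$, i.e.\ that the two summands appearing in the first decomposition of Lemma~\ref{detheo} are genuinely \emph{orthogonal} and not merely complementary; this is precisely the content of Dimock's Lemma~1, which we are entitled to invoke, so in practice the argument is bookkeeping with the two decompositions rather than new analysis. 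One should also note that the statement is symmetric under $T$ (the Euclidean time reflection \eqref{deftimerefl}), which interchanges $S_+$ and $S_-$ and fixes $S^1$, so it suffices to treat one ordering of the product and obtain the other by conjugating with $\Gamma$-type symmetry; but since the argument above already treats both summands explicitly, this remark is only a consistency check.
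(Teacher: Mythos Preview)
Your approach is correct and is precisely Dimock's argument, which the paper cites without reproducing: the operator identity follows by checking it on the three summands of the finer decomposition in Lemma~\ref{detheo}, using the coarser decomposition to identify the kernels of $e(\overline{S_\pm})$.

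One small point to tighten. On the third summand $(-\Delta_{S^2}+\mu^2)\mathbb{H}^1_0(S_+)$ you write ``$e(\overline{S_-})=0$ on it, and again the product vanishes,'' but the product $e(\overline{S_-})e(\overline{S_+})$ applies $e(\overline{S_+})$ \emph{first}, so knowing only that $e(\overline{S_-})$ kills this summand is not enough. You should add the observation that $(-\Delta_{S^2}+\mu^2)\mathbb{H}^1_0(S_+)\subset \mathbb{H}^{-1}_{\upharpoonright\overline{S_+}}(S^2)$---this follows either by locality of the differential operator or, more cleanly, by comparing the two decompositions in Lemma~\ref{detheo}: the finer one refines the coarser, so $\mathbb{H}^{-1}_{\upharpoonright\overline{S_+}}(S^2) = \mathbb{H}^{-1}_{\upharpoonright S^1}(S^2)\oplus(-\Delta_{S^2}+\mu^2)\mathbb{H}^1_0(S_+)$. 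Hence $e(\overline{S_+})$ acts as the identity on this summand, and then $e(\overline{S_-})$ annihilates it. With this inserted, the argument is complete.
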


We note that the origins of Dimock's work can be traced back to \cite{GRS} and even further to \cite{N1, N2, N3, N4}. 

\section{Conditional expectations}
\label{sec:3.1x}

The Markov property for the sphere, presented in Theorem~\ref{martheo}~$ vi.)$ below, 
 is satisfied, iff  for any function of the Euclidean field  in 
$\overline{S_\pm}$, conditioning to the fields in $\overline{S_\mp}$ 
is the same as conditioning to the fields in $\partial S_\pm=S^1$. 

\begin{theorem}[Simon \cite{S}, Theorem III.7, p.~91]
\label{bsimon}
Let $({\mathcal Q}, \Sigma, \mu)$ be a probability space  and let $\Sigma'$ be a sub-$\sigma$-algebra of $\Sigma$.
Let~$F$ be an element of $L^1_{\mathbb R} ({\mathcal Q}, \Sigma, \mu)$. Then  there exists a unique function 
$E_{\Sigma'} F$ such that
\begin{itemize}
\item[$ i.)$] $ E_{\Sigma'} F $ is $\Sigma'$-measurable;
\item[$ ii.)$] for all $G \in L^\infty_{{\mathbb R}} ({\mathcal Q}, \Sigma, {\rm d} \mu) $ which are $\Sigma'$-measurable
	\[
		\int_{\mathcal Q} {\rm d} \mu \; G E_{\Sigma'} F   = \int_{\mathcal Q} {\rm d} \mu \;  G F \;  .
	\]
\end{itemize}
$ E_{\Sigma'} F$ is called the {\em conditional expectation} of $F$ given $\Sigma'$.
\end{theorem}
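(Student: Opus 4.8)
**The statement to be proved is Theorem III.7 of Simon (the existence and uniqueness of conditional expectations), which the excerpt quotes verbatim.** Since this is a classical measure-theoretic result cited from the literature, the natural plan is to reproduce the standard Hilbert-space / Radon--Nikodym argument rather than anything specific to the sphere.

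The plan is to establish uniqueness first, then existence. For \emph{uniqueness}: suppose $G_1, G_2$ are both $\Sigma'$-measurable and satisfy condition $ii.)$ for every bounded $\Sigma'$-measurable $G$. Then $\int_{\mathcal Q} {\rm d}\mu\; G (G_1 - G_2) = 0$ for all such $G$; taking $G = \mathbb{1}_{\{G_1 - G_2 > 0\}}$ and $G = \mathbb{1}_{\{G_1 - G_2 < 0\}}$ (both bounded and $\Sigma'$-measurable since $G_1 - G_2$ is) forces $\mu(\{G_1 \neq G_2\}) = 0$, i.e.\ $G_1 = G_2$ $\mu$-a.e.

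For \emph{existence}, I would split into two reductions. First reduce to $F \ge 0$ by writing $F = F_+ - F_-$ and treating each part separately (conditional expectation will be linear, so this suffices). Next reduce to $F \in L^2$: for $F \ge 0$ in $L^1$, set $F_n = \min(F, n) \in L^2 \cap L^1$; construct $E_{\Sigma'} F_n$ for each $n$ and pass to the limit. For $F \in L^2_{\mathbb R}({\mathcal Q}, \Sigma, \mu)$, the subspace $L^2_{\mathbb R}({\mathcal Q}, \Sigma', \mu)$ is a closed subspace of the Hilbert space $L^2_{\mathbb R}({\mathcal Q}, \Sigma, \mu)$; let $E_{\Sigma'} F$ be its orthogonal projection onto that subspace. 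Then $E_{\Sigma'} F$ is $\Sigma'$-measurable by construction, and $\langle G, F - E_{\Sigma'} F\rangle = 0$ for all $G \in L^2_{\mathbb R}({\mathcal Q}, \Sigma', \mu)$, which is exactly condition $ii.)$ for $G \in L^\infty$ $\Sigma'$-measurable (these lie in $L^2$ since $\mu$ is finite). One checks that $F \ge 0$ implies $E_{\Sigma'} F \ge 0$ a.e.\ by testing against $G = \mathbb{1}_{\{E_{\Sigma'} F < 0\}}$, and that $\|E_{\Sigma'} F\|_{L^1} \le \|F\|_{L^1}$ (test against $G = \operatorname{sgn}(E_{\Sigma'}F)$, which is $\Sigma'$-measurable), so $E_{\Sigma'}$ is an $L^1$-contraction on $L^2 \cap L^1$ and extends continuously to all of $L^1_{\mathbb R}$.

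Finally, for general $F \ge 0$ in $L^1$, the truncations $F_n \uparrow F$ give $E_{\Sigma'} F_n$ which are monotone a.e.\ (by positivity applied to $F_{n+1} - F_n \ge 0$) and bounded in $L^1$ by $\|F\|_{L^1}$; by monotone convergence the pointwise limit $E_{\Sigma'} F$ exists in $L^1$, is $\Sigma'$-measurable as a limit of $\Sigma'$-measurable functions, and satisfies $ii.)$ by dominated/monotone convergence in the defining integral identity. The main (and really only) subtlety is the bookkeeping in these limiting arguments --- verifying that condition $ii.)$ survives each passage to the limit and that the positivity and contraction estimates hold at each stage. There is no genuine obstacle here; the result is standard, and the proof in the text will presumably just recall this argument.
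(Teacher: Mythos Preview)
Your proof is correct and is the standard argument (orthogonal projection in $L^2$, positivity and $L^1$-contraction, then extension by monotone approximation). Note, however, that the paper does not prove this theorem at all: it is stated as a citation from Simon's book and used as a black box, with no proof given in the text. So there is nothing to compare against; your write-up simply fills in a result the authors took for granted.
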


Next consider (see \cite[Vol.~II]{BR}) the Fock space $\Gamma( H_{{\mathbb C}}^{-1} (S^2))$  over the 
complexification $H_{{\mathbb C}}^{-1} (S^2)$ of $H^{-1}(S^2)$. Let $\Phi_E$ be the Fock space field operator and 
$\Omega_E $  the  vacuum vector in~$\Gamma( H_{{\mathbb C}}^{-1} (S^2))$. The map
\label{fockpage}
\label{fockref-2}
	\begin{equation} 
		\label{dfock}
		{\rm e}^{i \Phi (f) } \mapsto {\rm e}^{i \Phi_E (f)} \Omega_E , \quad f \in H_{{\mathbb C}}^{-1} (S^2) \; ,
	\end{equation}
extends to a unitary operator from 
$L^2( {\mathcal Q}, \Sigma, {\rm d} \Phi_C ) $ to the Euclidean Fock space $\Gamma( H_{{\mathbb C}}^{-1}(S^2))$.
Set
	\begin{equation}
		\label{pfock}
		E_{\Sigma_{S^1}} \doteq \Gamma \left(e({S^1}) \right), 
		\qquad 
		E_{\Sigma_{\overline{S_\pm}}} \doteq \Gamma \left(e({\overline{S_\pm}}) \right) \; .
	\end{equation}
Use (\ref{dfock}) and denote the 
linear operators on $L^2( {\mathcal Q}, \Sigma, {\rm d} \Phi_C ) $ corresponding to (\ref{pfock})  by ${\mathcal E}_{\Sigma_{S^1}} $ and ${\mathcal E}_{\Sigma_{\overline{S_\pm}}}$, respectively. 

\paragraph{\it Notation}
For any closed and simply connected region $X \subset S^2$  denote by $\Sigma_{X}$ the smallest sub-$\sigma$-algebra of
$\Sigma$ for which the functions 
	$
		\bigl\{ \Phi(f) \mid f \in H^{-1}_{\upharpoonright X} (S^2) \bigr\}  
	$
are measurable. 

\goodbreak
\begin{theorem}[Dimock \cite{D}, Theorem 1, p.~247] 
\label{martheo}   
\quad 
\begin{itemize} 
\item[$ i.)$] 
	${\mathcal E}_{\Sigma_{S^1}}$ is the conditional expectation for $\Sigma' = \Sigma_{S^1}$. 
\item[$ ii.)$] 
	${\mathcal E}_{\Sigma_{\overline{S_\pm}}} $ is the conditional expectation for $\Sigma' = \Sigma_{\overline{S_\mp}}$; 
\item[$ iii.)$] 
	If $F \in L^2 ( {\mathcal Q}, \Sigma_{\overline { S_\pm}},  {\rm d} \Phi_C)$, then ${\mathcal E}_{\Sigma_{\overline{S_\mp}} }
	F = {\mathcal E}_{\Sigma_{S^1} } F $. 
\item[$ iv.)$] 
	${\mathcal E}_{\Sigma_{\overline{S_\mp}}}{\mathcal E}_{\Sigma_{\overline { S_\pm}} }
	= {\mathcal E}_{\Sigma_{S^1}}$, acting  on $L^2 ( {\mathcal Q}, \Sigma,  {\rm d} \Phi_C)$  {\em (Markov property).}
\end{itemize}
\end{theorem}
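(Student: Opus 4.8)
The final statement is Theorem~\ref{martheo} (Dimock's Markov property), with four parts relating conditional expectations on the sphere to the orthogonal projections $e(S^1)$, $e(\overline{S_\pm})$ via the second-quantization functor $\Gamma$.

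The plan is to reduce everything to the identification \eqref{dfock} of $L^2({\mathcal Q},\Sigma,{\rm d}\Phi_C)$ with the bosonic Fock space $\Gamma(\mathbb{H}^{-1}_{\mathbb C}(S^2))$, together with Lemma~\ref{detheo} and Lemma~\ref{dlemma}, which are already available in the excerpt. First I would establish the general principle: if $K\subset S^2$ is a closed simply connected region, then under the isomorphism \eqref{dfock} the sub-$\sigma$-algebra $\Sigma_K$ corresponds to the subspace $\Gamma(\mathbb{H}^{-1}_{\upharpoonright K}(S^2))$, in the sense that $L^2({\mathcal Q},\Sigma_K,{\rm d}\Phi_C)\cong \Gamma(\mathbb{H}^{-1}_{\upharpoonright K,{\mathbb C}}(S^2))$, and that the conditional expectation $E_{\Sigma_K}$ corresponds to $\Gamma(e(K))$, where $e(K)$ is the orthogonal projection onto $\mathbb{H}^{-1}_{\upharpoonright K}(S^2)$. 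This is the content of Theorem~\ref{bsimon} applied to the Gaussian measure: $\Gamma(e(K))$ is a positivity-preserving, identity-preserving, self-adjoint idempotent on $L^2$ which maps onto the $\Sigma_K$-measurable functions and acts as the identity on them, and by uniqueness in Theorem~\ref{bsimon} it must coincide with $E_{\Sigma_K}$. Concretely, one checks on the total set of exponentials ${\rm e}^{i\Phi(f)}$, $f\in\mathbb{H}^{-1}(S^2)$, that $\Gamma(e(K)){\rm e}^{i\Phi_E(f)}\Omega_E = {\rm e}^{i\Phi_E(e(K)f)}\Omega_E \cdot {\rm e}^{-\frac12\langle f,(1-e(K))f\rangle}$, and that this equals $\int {\rm e}^{i\Phi(f)}\,d\Phi_C$ conditioned appropriately; the Gaussian factorization \eqref{e1.0} does the rest.

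With this dictionary in place, the four statements follow immediately from the geometry of the projections. For~$i.)$, $e(S^1)$ projects onto $\mathbb{H}^{-1}_{\upharpoonright S^1}(S^2)$, so $\Gamma(e(S^1))={\mathcal E}_{\Sigma_{S^1}}$ is the conditional expectation for $\Sigma'=\Sigma_{S^1}$. For~$ii.)$, $e(\overline{S_\pm})$ projects onto $\mathbb{H}^{-1}_{\upharpoonright\overline{S_\mp}}(S^2)$ by the notation fixed just before Lemma~\ref{detheo} (note the $\mp$/$\pm$ flip), so $\Gamma(e(\overline{S_\pm}))={\mathcal E}_{\Sigma_{\overline{S_\mp}}}$ is the conditional expectation for $\Sigma'=\Sigma_{\overline{S_\mp}}$. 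Statement~$iii.)$ says: if $F$ is $\Sigma_{\overline{S_\pm}}$-measurable then ${\mathcal E}_{\Sigma_{\overline{S_\mp}}}F={\mathcal E}_{\Sigma_{S^1}}F$; at the one-particle level this is the assertion $e(S^1)_{\upharpoonright \mathbb{H}^{-1}_{\upharpoonright\overline{S_\pm}}} = e(\overline{S_\mp})_{\upharpoonright \mathbb{H}^{-1}_{\upharpoonright\overline{S_\pm}}}$, which is exactly the pre-Markov identity $e(\overline{S_\mp})e(\overline{S_\pm})=e(S^1)$ of Lemma~\ref{dlemma} (and the corresponding statement for vectors in $\mathbb{H}^{-1}_{\upharpoonright\overline{S_\pm}}$), lifted via $\Gamma$. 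Finally~$iv.)$ is the operator identity ${\mathcal E}_{\Sigma_{\overline{S_\mp}}}{\mathcal E}_{\Sigma_{\overline{S_\pm}}}={\mathcal E}_{\Sigma_{S^1}}$ on all of $L^2$, which is $\Gamma$ applied to $e(\overline{S_\mp})e(\overline{S_\pm})=e(S^1)$ — precisely Lemma~\ref{dlemma} — using the functoriality $\Gamma(A)\Gamma(B)=\Gamma(AB)$ for contractions $A,B$.

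The main obstacle is the bookkeeping in the first paragraph: one must carefully verify that $\Gamma(e(K))$, a priori only an operator on Fock space, really implements conditioning on the probability space, i.e. that the measurability and averaging properties $i.)$ and $ii.)$ of Theorem~\ref{bsimon} hold. The measurability of $\Gamma(e(K)){\rm e}^{i\Phi(f)}$ with respect to $\Sigma_K$ follows since $e(K)f\in\mathbb{H}^{-1}_{\upharpoonright K}(S^2)$ and the Wick exponential of a $\Sigma_K$-measurable field is $\Sigma_K$-measurable, while the averaging identity reduces, after expanding both sides in the total set of exponentials, to the Gaussian computation $\int_{\mathcal Q}{\rm d}\Phi_C\,{\rm e}^{i\Phi(f)}{\rm e}^{i\Phi(g)} = {\rm e}^{-\frac12 C(f+g,f+g)}$ together with the orthogonal splitting $C(f,g)=\langle f,g\rangle_{\mathbb{H}^{-1}}$ and the fact that $e(K)$ is an orthogonal projection for this inner product; here one invokes Lemma~\ref{detheo} to guarantee $\mathbb{H}^{-1}(S^2)=\mathbb{H}^{-1}_{\upharpoonright K}(S^2)\oplus \mathbb{H}^{-1}_{\upharpoonright K}(S^2)^{\perp}$ with $e(K)$ the projection onto the first summand. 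Once this is settled, parts $i.)$--$iv.)$ are purely formal consequences of $\Gamma$ being a $*$-homomorphism-like functor on contractions and of the two lemmas of Dimock already quoted.
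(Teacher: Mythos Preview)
Your approach is correct and coincides with the paper's own proof, which is extremely terse: the paper simply says that parts $i.)$ and $ii.)$ ``follow directly from the definitions'' and that parts $iii.)$ and $iv.)$ follow from Lemma~\ref{dlemma}. Your writeup just unpacks what ``follow directly from the definitions'' means---namely, that for a Gaussian measure the second-quantized projection $\Gamma(e(K))$ on Fock space implements the conditional expectation $E_{\Sigma_K}$ on $L^2({\mathcal Q},\Sigma,{\rm d}\Phi_C)$ via the isomorphism~\eqref{dfock}---and then applies the functoriality $\Gamma(A)\Gamma(B)=\Gamma(AB)$ to the one-particle pre-Markov identity $e(\overline{S_\mp})e(\overline{S_\pm})=e(S^1)$ of Lemma~\ref{dlemma}, exactly as the paper intends.
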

\begin{proof} Parts $ i.)$ and $ ii.)$ follow directly from the definitions.
Parts $ iii.)$ and $ iv.)$ follow from Lemma~\ref{dlemma}.
\end{proof}

\section{Sharp-time fields}
\label{sec:3.2n}

Let  $\mathbb{H}^{-1}_{\mathbb C} (S^2)$ be the complexification of $\mathbb{H}^{-1} (S^2)$.
The Sobolev space~$\mathbb{H}^{-1}_{\mathbb C} (S^2)$ contains the distribution 
	\begin{equation} 
	\label{eqDeltaTensorh}
	 (\delta \otimes h) ( \vec x) \doteq   r^{-1} \delta (\vartheta)  h (0, \varrho) \; , 
	 \qquad h (0, \, . \,)\in C^\infty(S^1)\; , \qquad
	  \vec x \equiv  \vec x (\vartheta, \varrho) \; , 
	\end{equation}
using geographical coordinates\footnote{The tensor product notation will be used for functions as well.} (see Section \ref{geo-chart}), 
which is supported on $S^1$. If ${\rm supp\,} h$ does not contain $(0, \pm r, 0)$,
then \eqref{eqDeltaTensorh} equals, as an element  in $\mathbb{H}^{-1}_{\mathbb C} (S^2)$,
	\begin{equation} 
	\label{eqDeltaTensorh'}
	 (\delta \otimes h) ( \vec x)=	\delta (\theta)  \;  \frac{ h  (0,\psi )}{r \cos \psi }\;  -
	\delta (\theta - \pi)  \;  \frac{ h(\pi,\psi)}{r \cos \psi }\; , \qquad 
			  \vec x \equiv  \vec x (\theta, \psi) \; , 
	\end{equation} 
in path-space coordinates. 
The sign on the right hand side is due to the different orientations of the one-forms 
${\rm d} \varrho$ and  ${\rm d} \psi$ on $I_-$.  

\begin{lemma}
\label{3.9}
Consider distributions of the form \eqref{eqDeltaTensorh}.  It follows that the 
time-zero covariance\index{time-zero covariance}
	\begin{align}
		\label{tratra} 
			C_0 (\overline{h_{1}}, h_{2}) 
				&\doteq C \bigl(\delta \otimes \overline{h_{1}} , \delta 
					\otimes h_{2} \bigr)  \nonumber \\  
				&=  \tfrac{1}{2}
				 \int_{S^1} r \, {\rm d} \varrho  \int_{S^1} r \, {\rm d} \varrho' \; \overline{h_{1}(\varrho) }  \; c_{\nu} \, 
				P_{-\frac{1}{2} + i \nu} (- \cos (\varrho - \varrho')) \, h_{2} (\varrho')  
	\end{align}
 exists as a positive quadratic form on $C^\infty (S^1)$. 
 Moreover, $C_0$ is invariant under  rotations around the  axis connecting the geographical poles. 
The constant $c_{\nu}$  appearing in \eqref{tratra} is given by
	\[
		c_{\nu} = -\frac{1}{2\sin  ( \pi (- \tfrac{1}{2} + i \nu))} = \frac{1} {2 \cos (i \nu  \pi )} \; . 
	\]   
and---as in \eqref{s1} and \eqref{s2}---
	\[ 		
		\nu =  
			\begin{cases}
				i \sqrt{\frac{1}{4} -\mu^2 r^2} & \text{if \ $ 0< \mu < \tfrac{1}{2r} $}   \; , \\
				 \sqrt{\mu^2 r^2 - \frac{1}{4} } & \text{if \ $ \mu \ge \tfrac{1}{2r} $}  \; .
			\end{cases}  
	\] 
\end{lemma}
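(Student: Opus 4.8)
The plan is to read off the claim from the explicit integral kernel of $(-\Delta_{S^2}+\mu^2)^{-1}$ obtained in the proof of Proposition~\ref{3.9a}, restricted to the equator. Existence and positivity come almost for free: the discussion preceding the lemma records that $\delta\otimes h\in\mathbb{H}^{-1}_{\mathbb C}(S^2)$ whenever $h(0,\cdot)\in C^\infty(S^1)$, and by \eqref{c} the form $C$ is jointly continuous on $\mathbb{H}^{-1}_{\mathbb C}(S^2)$ with $C(\overline{f},f)=\langle f,(-\Delta_{S^2}+\mu^2)^{-1}f\rangle_{L^2(S^2,{\rm d}\Omega)}\ge 0$, since $(-\Delta_{S^2}+\mu^2)^{-1}$ is bounded and non-negative on $L^2(S^2,{\rm d}\Omega)$. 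Hence $C_0(\overline{h_1},h_2)\doteq C(\delta\otimes\overline{h_1},\delta\otimes h_2)=\langle\delta\otimes h_1,(-\Delta_{S^2}+\mu^2)^{-1}(\delta\otimes h_2)\rangle_{L^2}$ is a well-defined sesquilinear form on $C^\infty(S^1)$ with $C_0(\overline{h},h)\ge 0$.

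To identify its value I would work in the geographical chart of Section~\ref{geo-chart}, where $S^1\cong\{\vartheta=0\}$, the distribution $\delta\otimes h$ is given by \eqref{eqDeltaTensorh}, and the area element is ${\rm d}\Omega=r^2\cos\vartheta\,{\rm d}\vartheta\,{\rm d}\varrho$, so that $\int_{S^2}{\rm d}\Omega(\vec x)\,(\delta\otimes h)(\vec x)\,\phi(\vec x)=\int_{S^1}r\,{\rm d}\varrho\,h(0,\varrho)\,\phi(0,\varrho)$ because $\cos 0=1$. The proof of Proposition~\ref{3.9a} shows that $(-\Delta_{S^2}+\mu^2)^{-1}$ has the integral kernel $\tfrac{1}{4\cos(i\nu\pi)}\,P_{-\frac12-i\nu}\bigl(-\tfrac{\vec x\cdot\vec y}{r^2}\bigr)=\tfrac{c_\nu}{2}\,P_{s^+}\bigl(-\tfrac{\vec x\cdot\vec y}{r^2}\bigr)$, with $s^+=-\tfrac12-i\nu$ as in \eqref{dd1} and $P_{s^+}=P_{-\frac12+i\nu}$ by $P_s=P_{-1-s}$. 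For $\vec x\equiv\vec x(0,\varrho)$ and $\vec y\equiv\vec y(0,\varrho')$ on the equator one has $\vec x\cdot\vec y=r^2(\sin\varrho\sin\varrho'+\cos\varrho\cos\varrho')=r^2\cos(\varrho-\varrho')$; inserting this kernel and collapsing the two $\delta$'s as above yields exactly \eqref{tratra}, with $c_\nu=-\bigl(2\sin\pi(-\tfrac12+i\nu)\bigr)^{-1}=\bigl(2\cos(i\nu\pi)\bigr)^{-1}$ and the two cases of $\nu$ carried over from \eqref{s1}--\eqref{s2}.

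Two routine verifications finish the argument. First, absolute convergence of the double integral in \eqref{tratra}: near $\varrho=\varrho'$ the argument of $P_{s^+}$ tends to $-1$, where $P_{s^+}$ has only a logarithmic singularity, so $\varrho\mapsto P_{s^+}(-\cos\varrho)$ lies in $L^1(S^1)$; since $h_1,h_2$ are bounded and $S^1$ compact, \eqref{tratra} converges absolutely, confirming that $C_0$ is a genuine quadratic form on $C^\infty(S^1)$ and agrees with the operator expression above. Second, rotation invariance: the kernel $P_{s^+}(-\cos(\varrho-\varrho'))$ depends only on $\varrho-\varrho'$, hence is invariant under the joint shift $(\varrho,\varrho')\mapsto(\varrho+\alpha,\varrho'+\alpha)$ induced by the rotations $R_0(\alpha)$ about the axis $\{(x_0,0,0)\}$ joining the geographical poles $(\pm r,0,0)$; equivalently, $(-\Delta_{S^2}+\mu^2)^{-1}$ commutes with $(R_0(\alpha))_*$ and $(R_0(\alpha))_*(\delta\otimes h)=\delta\otimes((R_0(\alpha))_*h)$ on the equator.

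The one step requiring care — and the main obstacle — is the passage from the $L^2$-resolvent pairing to the literal double integral over $S^1\times S^1$, i.e.\ restricting the resolvent kernel to the measure-zero set $S^1$. I would handle this by mollifying $\delta$ in the $\vartheta$-variable by an approximate identity $\delta_k$, applying Proposition~\ref{3.9a} to $\delta_k\otimes\overline{h_1}$ and $\delta_k\otimes h_2$, and letting $k\to\infty$; the limit may be taken inside the $\varrho$- and $\varrho'$-integrations by dominated convergence, using that $K(\vec x,\vec y)=\tfrac{c_\nu}{2}P_{s^+}(-\vec x\cdot\vec y/r^2)$ is continuous off the diagonal and satisfies a uniform integrable bound of the form $|K(\vec x,\vec y)|\le a_1+a_2|\log d(\vec x,\vec y)|$ near the diagonal — the short-distance estimate recorded in Theorem~\ref{c-log} and the subsequent remark. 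Finally, for the complementary-series range $0<\mu<\tfrac1{2r}$ the formula \eqref{tratra} follows from the principal-series case by analytic continuation in $\mu$, equivalently in $\nu$, both sides being analytic there, exactly as in the footnote to the proof of Proposition~\ref{3.9a}.
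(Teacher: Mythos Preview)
Your proof is correct and follows the same route as the paper: invoke the resolvent kernel $\tfrac{1}{4\cos(i\nu\pi)}P_{-\frac12-i\nu}(-\vec x\cdot\vec y/r^2)$ from Proposition~\ref{3.9a}, collapse the two $\delta$'s onto the equator, and use $\vec x\cdot\vec y=r^2\cos(\varrho-\varrho')$ there. Your version is in fact more thorough than the paper's, which simply writes down the formal computation with $\delta$-functions; you additionally justify positivity, rotation invariance, and the restriction of the kernel to the codimension-one set $S^1$ via mollification and the logarithmic short-distance bound.
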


\begin{proof}
Recall Proposition \ref{3.9a}.
This allows us to compute
	\begin{align*}
		 & \left\langle \delta \otimes \overline{h_{1}}, (- \Delta_{S^2} +\mu^2)^{-1}  \delta \otimes h_{2} 
		\right\rangle_{L^2( S^2, {\rm d} \Omega)}  
		 \\
		& \qquad = \int_{S^2} r {\rm d} \psi' {\rm d} \theta'  	\int_{S^2} r {\rm d} \psi {\rm d} \theta \; 
		\sin \theta'  
		\; \delta (\theta'- \tfrac{\pi}{2}) h_{1}(\psi')
		\\
		& \qquad \quad \qquad \quad \times  \sum_{l=0}^\infty \sum_{k=-l}^l 
		\frac{\overline{Y_{l,k} (\theta', \psi')} Y_{l,k} (\theta, \psi)}{l(l+1) + \mu^2r^2} 
		\sin \theta \; \delta (\theta - \tfrac{\pi}{2})  h_{2} (\psi)    
		 \\
		& \qquad = \frac{1} {4 \cos (i \nu  \pi )} \int_{S^1} r {\rm d} \psi'  \int_{S^1}r  {\rm d} \psi'  \; h_{1}(\psi') 
		P_{-\frac{1}{2} + i \nu}  ( - \cos (\psi-\psi') )  		h_{2} (\psi)  \; .  
	\end{align*}
%
%
We have used that for $\vec x= (0, \sin \psi, \cos \psi)$ and $\vec y= (0, \sin \psi', \cos \psi')$  
	\[
		\tfrac{\vec x \cdot \vec y}{r^2} = \cos ( \psi'-\psi) \; , 
	\]
as $\cos  \psi  \cos \psi' + \sin  \psi \sin \psi' = \cos ( \psi'-\psi)$. 
\end{proof}

An explicit formula for $C_0 (h_{1}, h_{2})$ (restricted to a half-circle) in {\em path-space coordinates} 
is provided next. Let, for $ h\in {\mathcal D} (I_+) $ and $0\leq\theta_0\leq2\pi$, 
	\[
		(\delta_{\theta_0}\otimes h)(\vec{x})\doteq \delta(\theta-\theta_0)
		\frac{h(\psi)}{r\cos\psi}\;, \qquad  \vec x \equiv \vec x (\theta, \psi) \; .
	\]
For $\theta=0$, $\delta_{\theta=0}\otimes h$ coincides with
$\delta\otimes h$ defined in Eq.~\eqref{eqDeltaTensorh},
cf.~Eq~\eqref{eqDeltaTensorh'}. 
\begin{lemma}
\label{coid}
For $h_{1}, h_{2}\in {\mathcal D} (I_+) $, $0\leq \theta_{1},
\theta_{2}< 2 \pi $, 
\label{stcpage}
	\begin{align}
		\label{coideq2}
			&C_{|\theta_1 -\theta_2|} (h_1, h_2) \doteq 
 C \bigl(\delta_{\theta_{1}}\otimes 
			h_{1} , \delta_{\theta_{2}}\otimes h_{2} \bigr)\\ 
			&\qquad =  r \, \Bigl\langle  \overline{ \cos_\psi  h_{1} },  \frac{{\rm e}^{-   |\theta_{2}
						-\theta_{1}|  \varepsilon   }
					+ {\rm e}^{- (2 \pi-|\theta_{2}-\theta_{1}|)  \varepsilon    }}{
						2 \varepsilon  (1-{\rm e}^{- 2 \pi   \varepsilon  })} 
					\cos_\psi  h_{2} \Bigr\rangle_{L^{2}(I_+ , \frac{r {\rm d} \psi} {\cos \psi})} \, , \nonumber 
	\end{align}
with $\varepsilon^2  =-  (\cos \psi \, \partial_\psi)^2 + \mu^2r^2 \cos^2 \psi $.  
\end{lemma}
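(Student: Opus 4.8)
The plan is to reduce the computation of $C$ to a one-dimensional Green's function problem in the $\theta$-variable of the path-space chart~\eqref{ps-cord}, in which ${\rm d}\Omega = r^2\cos\psi\,{\rm d}\theta\,{\rm d}\psi$ and, by~\eqref{L2} and the definition of $\varepsilon^2$, the Klein--Gordon operator factorises as
\[
-\Delta_{S^2}+\mu^2 = \tfrac{1}{r^2\cos^2\psi}\bigl(-\partial_\theta^2+\varepsilon^2\bigr)\;,\qquad \varepsilon^2=-(\cos_\psi\partial_\psi)^2+\mu^2 r^2\cos^2\psi\;,
\]
on $S^2\setminus\{(0,\pm r,0)\}$, where $\varepsilon^2$ is the positive self-adjoint Friedrichs operator of Lemma~\ref{Asadj}. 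Since $h_1,h_2\in{\mathcal D}(I_+)$ are supported away from the two excluded points, all supports remain inside this chart, and — because $\varepsilon$ does not mix functions carried by $I_+$ and $I_-$ — one may throughout work with $\varepsilon_{\upharpoonright I_+}\ge 0$ on $L^2\bigl(I_+,\tfrac{r\,{\rm d}\psi}{\cos\psi}\bigr)$, recalling from Lemma~\ref{Lm3.5} that $0$ is not an eigenvalue. I would also record at the outset that $\delta_\theta\otimes h\in\mathbb{H}^{-1}_{\mathbb C}(S^2)$ (cf.~\eqref{eqDeltaTensorh}--\eqref{eqDeltaTensorh'}), so that for $\mu>0$ the left-hand side $C(\delta_{\theta_1}\otimes h_1,\delta_{\theta_2}\otimes h_2)$ is a well-defined continuous sesquilinear form in $h_1,h_2$.

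Next I would compute the Green's function of $-\partial_\theta^2+a^2$ on the circle of circumference $2\pi$ — either by solving the second-order ODE with periodic matching, or by resumming $\tfrac{1}{2\pi}\sum_n(n^2+a^2)^{-1}{\rm e}^{in\theta}$ — obtaining
\[
G(\theta;a)=\frac{{\rm e}^{-a|\theta|}+{\rm e}^{-a(2\pi-|\theta|)}}{2a\,(1-{\rm e}^{-2\pi a})}\;,\qquad |\theta|\le 2\pi\;.
\]
By the functional calculus for $\varepsilon$, the solution $u=(-\Delta_{S^2}+\mu^2)^{-1}(\delta_{\theta_2}\otimes h_2)$ then satisfies, on the chart, $(-\partial_\theta^2+\varepsilon^2)u=r\,\delta(\theta-\theta_2)\,\cos_\psi h_2$, whence $u(\theta,\psi)=r\bigl[G(\theta-\theta_2;\varepsilon)\cos_\psi h_2\bigr](\psi)$; uniqueness — the only periodic, square-integrable solution of the homogeneous equation is zero, since $-\partial_\theta^2+\varepsilon^2\ge 0$ with $0$ not an eigenvalue — identifies this with the restriction of the global $u\in\mathbb{H}^1(S^2)$.

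Finally I would pair $u$ with $\delta_{\theta_1}\otimes h_1$ using~\eqref{c}: the $\theta$-integral against $\delta(\theta-\theta_1)$, together with the cancellation of the $\cos\psi$ factors coming from ${\rm d}\Omega$ against those in $\delta_{\theta_1}\otimes h_1$, collapses the expression to
\[
C(\delta_{\theta_1}\otimes h_1,\delta_{\theta_2}\otimes h_2)=r\int_{I_+} r\,{\rm d}\psi\; h_1(\psi)\,\bigl[G(\theta_1-\theta_2;\varepsilon)\cos_\psi h_2\bigr](\psi)\;,
\]
and rewriting this integral as a scalar product in $L^2\bigl(I_+,\tfrac{r\,{\rm d}\psi}{\cos\psi}\bigr)$ (moving one factor $\cos_\psi$ across) gives exactly~\eqref{coideq2}, using $|\theta_1-\theta_2|=|\theta_2-\theta_1|$ and the symmetry of the numerator of $G$ under $s\mapsto 2\pi-s$. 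As consistency checks, at $\theta_1=\theta_2$ one has $G\to\tfrac{1}{2\varepsilon}\coth(\pi\varepsilon)$ and recovers the equal-time covariance of Lemma~\ref{3.9}, and the resulting formula agrees with~\eqref{coideq}.

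The main obstacle will be the rigorous control of these distributional manipulations near the bottom of the spectrum. The $\delta_{\theta_i}\otimes h_i$ lie in $\mathbb{H}^{-1}(S^2)$ but not in $L^2(S^2,{\rm d}\Omega)$, and since $0$ belongs to the continuous spectrum of $\varepsilon_{\upharpoonright I_+}$ the operators $(-\partial_\theta^2+\varepsilon^2)^{-1}$ and $G(\cdot;\varepsilon)$ are unbounded, with $G(s;\varepsilon)$ diverging like $\varepsilon^{-2}$ as $\varepsilon\to 0^+$. One must therefore verify that the right-hand side of~\eqref{coideq2} is in fact finite, i.e.\ that $\cos_\psi h_i$ carries no low-energy spectral weight of $\varepsilon$ strong enough to spoil convergence (no ``zero-resonance'' obstruction); this is guaranteed by the fact that the kernel of $(-\Delta_{S^2}+\mu^2)^{-1}$ is the locally integrable Legendre kernel (Proposition~\ref{3.9a}, Lemma~\ref{3.9}) with only a logarithmic singularity (Theorem~\ref{c-log}). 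A clean way to make the argument watertight is to establish~\eqref{coideq2} first with a spectral cutoff $\varepsilon\mapsto\max(\varepsilon,\kappa)$ (or, equivalently, with the $\delta_{\theta_i}$ replaced by smooth approximants), where all operators are bounded, and then pass to the limit using these kernel bounds and dominated convergence, the disjointness of $I_+$ and $I_-$ preventing any cross terms from entering.
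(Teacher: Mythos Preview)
Your proposal is correct and follows essentially the same route as the paper: factorise $-\Delta_{S^2}+\mu^2=\tfrac{1}{r^2\cos^2\psi}(-\partial_\theta^2+\varepsilon^2)$ in path-space coordinates, reduce to the one-dimensional periodic Green's function in $\theta$, and identify the latter via the Poisson summation identity $\tfrac{1}{2\pi}\sum_{\ell}\tfrac{e^{i\ell\theta}}{\ell^2+\varepsilon^2}=\tfrac{e^{-|\theta|\varepsilon}+e^{-(2\pi-|\theta|)\varepsilon}}{2\varepsilon(1-e^{-2\pi\varepsilon})}$. The only cosmetic difference is the regularisation: the paper approximates the $\delta_{\theta_i}$ by truncated Fourier series $\delta_k$ and passes to the limit, whereas you propose a spectral cutoff on $\varepsilon$ or smooth approximants; both handle the same absence of a mass gap for $\varepsilon_{\upharpoonright I_+}$, and the paper justifies Poisson summation precisely by noting that $\varepsilon$ has purely absolutely continuous spectrum with no zero eigenvalue.
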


\begin{proof} 
An approximation of the Dirac $\delta$-function is given by $\delta_{k}$, $k\in {\mathbb N}$, with
	\begin{equation}
		\label{deltaka}
		\delta_{k}(\theta) =(2 \pi )^{-1}\sum_{|\ell|\leq
                  k}{\rm e}^{i  \ell \theta } 
   \; \chi_{[0,2\pi)}(\theta) 
\; , 
		\quad \theta \in [0, 2 \pi )  \; ,
	\end{equation}
and $\chi_{[0,2\pi)}$  the characteristic function of the interval $ [0, 2 \pi ) \subset {\mathbb R}$. 
Use 
	\[
		 (- \Delta
		 	+\mu^2)^{-1}
		 	=   \left( - \partial^2_\theta + \varepsilon^2 \right)^{-1} r^2 \cos^{2} \psi   
	\] 
and
	\[
		 \sum_{\ell' \in {\mathbb Z}} \int_0^{2 \pi} \frac{{\rm d} \theta} {2 \pi }  \; {\rm e}^{i  \ell (\theta - \theta_1) } 
		 \left( - \partial^2_\theta + \varepsilon^2 \right)^{-1} {\rm e}^{i  \ell' (\theta - \theta_2)} 
		 =  \frac{{\rm e}^{i  \ell (\theta_{2}- \theta_{1})}}{
                   {\ell^2} + \varepsilon^2}  
	\]
to show that for $0\leq \theta_{1} , \theta_{2}< 2 \pi $ 
	\begin{align*}
			&   \lim_{k \to \infty} \lim_{k' \to \infty} C \bigl(\delta_{k'}(.-\theta_{1})
			 	\otimes h_{1}, \delta_{k}(.-\theta_{2})\otimes h_{2} \bigr) \nonumber \\
			&\quad  
			= \lim_{k \to \infty} \lim_{k' \to \infty} \Bigl\langle \overline{\delta_{k'}(.-\theta_{1})\otimes  h_{1}}, 
				\bigl(-\Delta_{S^2}+m^2\bigr)^{-1} \delta_{k}(.-\theta_{2})\otimes
				 h_{2} \Bigr\rangle_{L^2 (S^2,  {\rm d} \Omega) } \nonumber \\ 
			& \quad  = \lim_{k \to \infty} (2 \pi)^{-1}\sum_{|\ell|\leq k} 
				\Bigl\langle \overline{ \cos_\psi^{-1}\, h_{1}} , \frac{r {\rm e}^{i  \ell (\theta_{2}- \theta_{1})}}{  {\ell^2} 
			+ \varepsilon^2} \, 
					\cos_\psi  \, h_{2}   \Bigr\rangle_{L^{2}(I_+,  r \cos \psi \, {\rm d} \psi)}
				\nonumber \\ 
			& \quad  = \lim_{k \to \infty} (2 \pi)^{-1}\sum_{|\ell|\leq k} 
					\Bigl\langle  \overline{\cos_\psi  h_{1} } ,  
					\frac{ r {\rm e}^{i  \ell (\theta_{2}- \theta_{1})}}{  {\ell^2} 
						+ \varepsilon^2} \,  \cos_\psi h_{2} 
						\Bigr\rangle_{L^{2}(I_+,  \cos \psi^{-1} \,r  {\rm d} \psi)} \; .
		\end{align*}
$\varepsilon^2 $ is a differential operator, thus $\varepsilon^2$ acts locally and maps the subspaces 
\[
 	{\mathscr D}_\pm \doteq {\mathscr D} (\varepsilon^2) \cap L^2 \bigl(I_\pm,|\cos\psi|^{-1} r {\rm d} \psi \bigr)
\] 
into $L^2 \left(I_\pm,|\cos\psi|^{-1} r {\rm d} \psi \right)$, respectively. It therefore is consistent (see \eqref{vaepsdef}) to define 
	\begin{equation}
		\varepsilon (h_+ + h_-) 
			\doteq \sqrt{{\varepsilon^2}_{\upharpoonright 	 {\mathscr D}_+}} \; h_+   
			- \sqrt{{\varepsilon^2}_{\upharpoonright 	{\mathscr D}_-}} \; h_-\; , 
				\qquad h_\pm\in {\mathscr D}_\pm\;  . 
	\end{equation}
$\varepsilon = \varepsilon_{\upharpoonright I_+}+\varepsilon_{\upharpoonright I_-}$ 
is densely defined by \eqref{vaepsdef}, as ${\mathscr D}_+ \oplus {\mathscr D}_- 
= {\mathscr D}(\varepsilon^2)$. On the half\-circle~$I_+ $ the operator $\varepsilon$ has,  
just like $\varepsilon^2$, purely a.c.~spectrum on all of~${\mathbb R}^+$. 
Especially, $\varepsilon$ does not have a discrete eigenvalue at zero.
Thus, despite the fact that $\varepsilon$ has no mass gap,  one can apply the Poisson sum formula (see, \emph{e.g.},~\cite{KL2}) 
	\[
		\frac{1}{2 \pi  } \sum_{\ell\in {\mathbb Z}}\frac{{\rm e}^{i \ell  \theta}}{ \ell^{2}+
			\varepsilon^{2}}= \frac{{\rm e}^{-| \theta | |\varepsilon |}+
			{\rm e}^{-(2 \pi -| \theta |)|\varepsilon|}}{ 2  |\varepsilon|  (
			1-{\rm e}^{- 2 \pi  |\varepsilon|  })}  \: \: \hbox{ for } \: \: 0\leq |  \theta |< 2 \pi \;  
	\]
to conclude that \eqref{coideq2} holds.
\end{proof}

\goodbreak

Note that for $\theta=\pi$, we have 
	\begin{equation} 
		\label{eqDeltaPi}
		\delta_{\pi}\otimes  h = 	\delta \otimes {P_1}_* h \; , 
	\end{equation} 
where ${P_1}_*$ is the pull-back of the reflection at the
$x_0$-$x_1$ plane, \emph{i.e.},
	\[ 
		({P_1}_* h)(\psi) \doteq h(\pi-\psi) \; .
	\] 

\begin{remark} Consider distributions of the form \eqref{eqDeltaTensorh}. 
The closure of this set of distributions within the Sobolev space $\mathbb{H}^{-1}(S^2)$ can be 
naturally identified with the Hilbert space $H^{-1}_{S^1} (S^2, \mathbb{R} ) \cong \widehat{\mathfrak h} (S^1, \mathbb{R} )$
introduced in Definition \ref{zeit-null-Hilbert}, 
which itself is the completion of $C_{\mathbb R}^\infty (S^1)$ 
in the norm
	\[
		\| h \|^2_{\widehat{\mathfrak h}(S^1)} = 
		2 
		C \bigl(\delta \otimes \overline{h } , \delta 
					\otimes h  \bigr)  \;  . 
	\]
We denote the scalar product in $\widehat{\mathfrak h}(S^1)$ by
	\[
	\langle h_1, h_2 \rangle_{\widehat{\mathfrak h}(S^1)} \doteq  
	  \bigl\langle h_1, \tfrac{1}{2 \omega}  h_2 \bigr\rangle_{L^2( S^1, r {\rm d} \psi)} \; . 
	\]
\end{remark}

\goodbreak
\begin{corollary}[Time-zero Covariances] 
\label{wichtig} \quad 
\begin{itemize}
\item [$ i.)$] Let $h \in \widehat{\mathfrak h} (S^1)$. Then 
	\begin{align}
		\label{corwichtig}
		&
		\| h \|_{\widehat{\mathfrak h}(S^1)}  
		= r  \bigl\langle |\cos_\psi| \,  h , \big( \tfrac{\coth  \pi |\varepsilon|}
			{2  |\varepsilon|   }
   		+
		\tfrac{	{P_1}_*}{2 \varepsilon \sinh \pi \varepsilon  }\big)\,  
		|\cos_\psi|\,  h   \bigr\rangle_{L^{2}(S^1 , 
                          \tfrac{r {\rm d} \psi}{|\cos \psi|}) }.  
	\end{align}
\item [$ ii.)$] The pseudo-differential operator 
$\omega $ on $L^2(S^1, r {\rm d}\psi)$ 
satisfies the operator identity 
	\begin{equation}   
		 \label{eqMagicForm-a}
 		\omega  =  | \color{red}r\color{black}  \cos_\psi|^{-1}\,|\varepsilon|\,\bigl(\coth\pi\varepsilon
			-\tfrac{{P_1}_*}{\sinh\pi|\varepsilon|} \bigr)^{-1}   \; .  
  	\end{equation}
\end{itemize}
\end{corollary}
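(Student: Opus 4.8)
The plan is to derive both statements of the corollary from the two representations of the time-zero covariance already established in Proposition~\ref{thhm}, noting that the corollary is essentially a restatement of that proposition adapted to the path-space coordinates used for the sphere and to the half-circle $I_+$. First I would observe that, by the remark immediately preceding the corollary, the closure of the distributions $\delta\otimes h$ inside $\mathbb{H}^{-1}(S^2)$ is isometric to $\widehat{\mathfrak h}(S^1)$ with $\|h\|_{\widehat{\mathfrak h}(S^1)}^2 = 2\,C(\delta\otimes\overline h,\delta\otimes h)$, and that $C_0(\overline h,h)$ was computed in Lemma~\ref{3.9} and re-expressed in path-space coordinates in Lemma~\ref{coid} via the Poisson summation formula. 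So for part $i.)$ I would take the formula \eqref{coideq2} of Lemma~\ref{coid} at $\theta_1=\theta_2$ (i.e.\ $|\theta_1-\theta_2|=0$), which gives
\[
C_0(\overline h,h) = r\,\Bigl\langle \overline{\cos_\psi h}\,,\, \tfrac{1+{\rm e}^{-2\pi\varepsilon}}{2\varepsilon(1-{\rm e}^{-2\pi\varepsilon})}\,\cos_\psi h\Bigr\rangle_{L^2(I_+,\frac{r{\rm d}\psi}{\cos\psi})}
\]
for $h$ supported in $I_+$; here the first factor is $\tfrac{1}{2\varepsilon}\coth\pi\varepsilon$ on $L^2(I_+)$. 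To obtain the full circle formula \eqref{corwichtig} one must account for the two contributions in \eqref{eqDeltaTensorh'}, namely the term supported near $\theta=0$ and the term near $\theta=\pi$, and the cross term between them, which by \eqref{eqDeltaPi} produces the reflection $(P_1)_*$ together with the factor $\tfrac{1}{2\varepsilon\sinh\pi\varepsilon}$ coming from $|\theta_1-\theta_2|=\pi$ in \eqref{coideq2}; doubling $C_0$ then gives \eqref{corwichtig}. This is exactly the content of \eqref{eqMagicFormB} in Proposition~\ref{thhm}~$ii.)$, up to the change of variable $\psi\leftrightarrow$ path-space $\psi$ and the bookkeeping of $|\cos_\psi|$ versus $\cos_\psi$; I would therefore either cite Proposition~\ref{thhm}~$ii.)$ directly or re-run the short half-circle computation.

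For part $ii.)$, the operator identity \eqref{eqMagicForm-a} is literally the identity \eqref{eqMagicForm} of Corollary~\ref{MagicForm}, which was in turn obtained by comparing parts $i.)$ and $ii.)$ of Proposition~\ref{thhm}; so the plan is to invoke that comparison. Concretely, part $i.)$ of Proposition~\ref{thhm} writes $\langle h,h'\rangle_{\widehat{\mathfrak h}(S^1)} = \langle h,\tfrac{1}{2\omega}h'\rangle_{L^2(S^1,r{\rm d}\psi)}$, while \eqref{corwichtig}/\eqref{eqMagicFormB} writes the same form as $r\,\langle h,\tfrac{1}{2|\varepsilon|}(\coth\pi|\varepsilon|+\tfrac{(P_1)_*}{\sinh\pi|\varepsilon|})|\cos_\psi| h'\rangle$; equating the two (and using that $C^\infty(S^1)$ is a core so the identity of forms upgrades to an identity of operators on a common domain) yields
\[
\omega^{-1} = \tfrac{1}{|\varepsilon|}\Bigl(\coth\pi|\varepsilon| + \tfrac{(P_1)_*}{\sinh\pi|\varepsilon|}\Bigr)\,|r\cos_\psi|,
\]
and inverting gives \eqref{eqMagicForm-a}. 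The one technical point to check is invertibility of $\coth\pi|\varepsilon| + \tfrac{(P_1)_*}{\sinh\pi|\varepsilon|}$ on the relevant domain: since $(P_1)_*$ is a self-adjoint involution and $\coth\pi\lambda > 1/\sinh\pi\lambda \geq 0$ for $\lambda>0$ with strict inequality, the operator is bounded below by a strictly positive constant on $L^2(S^1,|\cos_\psi|^{-1}r{\rm d}\psi)$, hence invertible; moreover $\varepsilon$ has no zero eigenvalue by Lemma~\ref{Lm3.5}, so $|\varepsilon|^{-1}$ makes sense on a dense domain.

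The main obstacle I expect is purely bookkeeping rather than conceptual: carefully matching the measures and the $|\cos_\psi|$ versus $\cos_\psi$ factors between the three coordinate systems in play (geographical coordinates used in Lemma~\ref{3.9}, path-space coordinates used in Lemma~\ref{coid}, and the $L^2(S^1,r{\rm d}\psi)$ normalization in which $\omega$ is defined), and tracking the sign/orientation issue flagged after \eqref{eqDeltaTensorh'} (the relative minus sign between the $\theta=0$ and $\theta=\pi$ pieces, which is precisely what turns an ordinary reflection into $(P_1)_*$ with the right sign in \eqref{eqMagicForm-a}). Since all the analytic input — the Poisson summation identity, the positivity and spectral properties of $\varepsilon$, and the two representations of the scalar product — is already available, the proof reduces to assembling these pieces, and I would write it as a two-line deduction citing Lemma~\ref{coid}, the remark preceding the corollary, and Corollary~\ref{MagicForm} (equivalently Proposition~\ref{thhm}).
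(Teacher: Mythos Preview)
Your approach is correct and essentially the same as the paper's. The paper decomposes $h = h_+ + h_-$ with ${\rm supp}\,h_\pm \subset I_\pm$ (which is exactly your two-piece reading of \eqref{eqDeltaTensorh'}), computes the diagonal terms $\langle h_\pm,\tfrac{1}{2\omega}h_\pm\rangle$ via Lemma~\ref{coid} at $\theta_1=\theta_2$ to get the $\coth$ contribution, and uses \eqref{eqDeltaPi} together with Lemma~\ref{coid} at $|\theta_1-\theta_2|=\pi$ for the cross terms to produce the $(P_1)_*/\sinh$ piece; part~$ii.)$ then follows by polarization and inversion, just as you outline. One caution: citing Proposition~\ref{thhm}~$ii.)$ as a shortcut is logically delicate, since its proof is by forward reference to Lemma~\ref{3.9}, so the present corollary is really where the identity is established---stick with the direct half-circle computation.
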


\begin{proof}
Write $h=h_++h_-$, where the support of $h_\pm$ is contained in
$I_\pm$, respectively. By Lemma~\ref{3.9} and Lemma~\ref{coid}, 
	\begin{align*}
		 \bigl\langle  h_+, \tfrac{1}{2\omega} h_+ \bigr\rangle_{L^2(S^1, r {\rm d} \psi)} 
		&= r \bigl\langle \cos_\psi  h_+   ,  \tfrac{\coth   \pi  |\varepsilon| }
			{2  |\varepsilon|   }  
			\cos_\psi h_+   \bigr\rangle_{L^{2}(S^1 ,
                          \frac{r {\rm d} \psi}{|\cos \psi|}) }.  
	\end{align*}
Now note that 
$\langle  h_-,\frac{1}{2\omega}h_-\rangle_{L^2(I_-, r {\rm d} \psi)} =  
\langle {P_1}_*  h_- ,\frac{1}{2\omega}{P_1}_*h_-\rangle_{L^2(I_+, r {\rm d}
\psi)} $ and again apply Lemma~\ref{3.9} and Lemma~\ref{coid} to  find a
similar expression. 
For the mixed term, use Eq.~\eqref{eqDeltaPi} 
to write 
	\begin{align*}
		 \bigl\langle h_+, \tfrac{1}{2\omega} h_- \bigr\rangle_{L^{2}(S^1 ,
                          \frac{r {\rm d} \psi}{|\cos \psi|}) } 
 		& =  C(\delta\otimes  \overline{h_+} ,\delta  \otimes h_-) 
		\nonumber \\
		&= 
				C \bigl( \delta_0\otimes  \overline{h_+ },\delta_\pi  \otimes {P_1}_*h_- \bigr) 
		= C_\pi( \overline{h_+},{P_1}_* h_-)  
		\nonumber \\
		&= 
				r \bigl\langle \cos_\psi h_+ ,  \tfrac{ {\rm e}^{- \pi  |\varepsilon|}}
					{ |\varepsilon| (1- {\rm e}^{-2\pi  |\varepsilon|  })}  
					\cos_\psi {P_1}_*h_-   \bigr\rangle_{L^{2}(I_+ ,
                         			 \frac{r {\rm d} \psi}{|\cos \psi|}) }  \nonumber \\
		&= 
				r \bigl\langle \cos_\psi  h_+ ,  \tfrac{1}{2  |\varepsilon| \sinh \pi 
 					|\varepsilon|}\cos_\psi {P_1}_*h_-   \bigr\rangle_{L^{2}(I_+ ,
                          			\frac{r {\rm d} \psi}{|\cos \psi|}) } \; .  
	\end{align*}
The term $\langle h_- , \frac{1}{2\omega} h_+\rangle$ yields a similar
expression, with $h_+$ and $h_-$ interchanged. 
We can now put the four terms together. The proof of Eq.~\eqref{corwichtig}
is completed by  noting that $\varepsilon$ leaves
the two subspaces $L^{2}(I_\pm,\frac{r{\rm d} \psi}{|\cos \psi|})$ invariant. The latter implies, by 
polarization, the  operator identity  
	\[
		\omega^{-1} =  |\varepsilon|^{-1}\, 
					\bigl( \coth\pi\varepsilon+\tfrac{{P_1}_*}
					{\sinh\pi|\varepsilon|} \bigr) \, |  r \, \cos_\psi |   \; ,   
	\]
which is equivalent to~\eqref{eqMagicForm}. 
\end{proof}

\goodbreak
The existence of Euclidean sharp-time fields now follows from~\eqref{e1.0} and 
Lemma~\ref{3.9}:

\begin{proposition}
\label{3.10}
The  {\em time-zero fields} 
	\begin{equation}
 		\Phi (0, h) =\lim_{n\to \infty}\Phi(\delta_{n}(\, .\,)\otimes h) \; , 
		\qquad h \in C_{{\mathbb R}}^\infty (S^1) \; ,
		\label{e1.4}
	\end{equation}
exist as elements of $L^{p}({\mathcal Q}, \Sigma, {\rm d} \Phi_{C})$, $1 \le p < \infty$.
\end{proposition}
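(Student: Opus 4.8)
The plan is to establish convergence of the sequence $\Phi(\delta_n(\,.\,)\otimes h)$ in every $L^p({\mathcal Q},\Sigma,{\rm d}\Phi_C)$ with $1\le p<\infty$ by showing it is Cauchy there, and then identifying the limit. Since ${\rm d}\Phi_C$ is Gaussian and each $\Phi(\delta_n(\,.\,)\otimes h)$ is (the image under the evaluation map of) a fixed vector in $\mathbb{H}^{-1}_{\mathbb C}(S^2)$, the Wick/Gaussian moment formula \eqref{e1.0} reduces all $L^p$-norms to powers of the covariance: for $p$ even one has $\int_{\mathcal Q}{\rm d}\Phi_C\;|\Phi(f_n)-\Phi(f_m)|^p=(p-1)!!\,C(f_n-f_m,f_n-f_m)^{p/2}$, and by H\"older the odd-$p$ and non-integer-$p$ cases are controlled by the even ones. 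Hence it suffices to prove that $\delta_n(\,.\,)\otimes h$ is a Cauchy sequence in $\mathbb{H}^{-1}(S^2)$, equivalently that $C(\delta_n(\,.\,)\otimes h-\delta_m(\,.\,)\otimes h,\;\delta_n(\,.\,)\otimes h-\delta_m(\,.\,)\otimes h)\to 0$ as $n,m\to\infty$.

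First I would fix the approximation $\delta_n$ of the Dirac measure on the circle $\{\vartheta=0\}$ in geographical coordinates, noting that $\delta_n(\,.\,)\otimes h=r^{-1}\delta_n(\vartheta)h(0,\varrho)$ lives in $\mathbb{H}^{-1}_{\mathbb C}(S^2)$ for each $n$ by the mapping property established in Lemma (the extension of $f\mapsto\Phi(f)$ to $\mathbb{H}^{-1}(S^2)$) together with the fact that the closure of the distributions $\delta\otimes h$ inside $\mathbb{H}^{-1}(S^2)$ is identified with $\widehat{\mathfrak h}(S^1)$ (see the Remark preceding this Proposition, and Definition~\ref{zeit-null-Hilbert}). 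Then I would compute $C(\delta_n\otimes h,\delta_m\otimes h)$ explicitly using Proposition~\ref{3.9a}, which gives $C(f,g)=\tfrac12\int_{S^2}\int_{S^2}f\,c_\nu P_{s^+}(-\tfrac{\vec x\cdot\vec y}{r^2})\,g\,{\rm d}\Omega\,{\rm d}\Omega$. Because the kernel $P_{s^+}(-\cos(\varrho-\varrho'))$ is integrable on $S^1\times S^1$ (its only singularity, at $\varrho=\varrho'$, is logarithmic, hence in $L^1$, as recorded in Lemma~\ref{3.9} and in the discussion of the short-distance behaviour of $C$), the function $\varrho'\mapsto\int_{S^1}r\,{\rm d}\varrho\;\overline{h(\varrho)}\,c_\nu P_{s^+}(-\cos(\varrho-\varrho'))$ is continuous (indeed bounded) on $S^1$, and therefore $C(\delta_n\otimes h,\delta_m\otimes h)\to C_0(\bar h,h)$ as $n,m\to\infty$ by the defining property of the approximate identity $\delta_n$. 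This simultaneously shows the sequence is Cauchy and that the limiting quadratic form is the time-zero covariance $C_0(\bar h,h)$ of Lemma~\ref{3.9}, which is finite and positive; so the limit $\Phi(0,h)$ exists in $L^2$, hence (again by the Gaussian moment formula applied to the single limiting vector) in every $L^p$, $1\le p<\infty$.

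Finally I would record that the limit is independent of the choice of approximating sequence $\delta_n$ (any two such sequences can be interlaced into a single approximate identity, and the argument above shows the interlaced sequence still converges, forcing the two limits to agree), so $\Phi(0,h)$ is well-defined, and that $h\mapsto\Phi(0,h)$ is linear and continuous from $C^\infty_{\mathbb R}(S^1)$ (with the $\widehat{\mathfrak h}(S^1)$-norm) into $\bigcap_{1\le p<\infty}L^p({\mathcal Q},\Sigma,{\rm d}\Phi_C)$, since $\|\Phi(0,h)\|_{L^p}^2$ is, up to the constant $(p-1)!!^{2/p}$, equal to $C_0(\bar h,h)=\tfrac12\|h\|^2_{\widehat{\mathfrak h}(S^1)}$. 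The main obstacle is the integrability/continuity input: one must be sure that the Legendre-function kernel $P_{s^+}(-\cos(\varrho-\varrho'))$ really is in $L^1(S^1\times S^1)$ and that convolving it against the smooth $h$ produces a continuous function on which the approximate-identity limit may be taken — this is exactly where the logarithmic (rather than power-law) nature of the coincidence-point singularity, noted after Theorem~\ref{c-log}, is essential; everything else is the routine Gaussian reduction.
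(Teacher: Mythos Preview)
Your proposal is correct and follows essentially the same route as the paper, which simply records that the statement ``follows from~\eqref{e1.0} and Lemma~\ref{3.9}'': Lemma~\ref{3.9} shows $\delta\otimes h\in\mathbb{H}^{-1}(S^2)$ with $\|\delta\otimes h\|_{-1}^2=C_0(\bar h,h)<\infty$, and the Gaussian moment formula~\eqref{e1.0} (equivalently the continuous extension~\eqref{e1.6bb}) lifts $\mathbb{H}^{-1}$-convergence to $L^p$-convergence; you have unpacked the Cauchy argument that underlies this continuous extension. One minor slip in your write-up: the approximate identity $\delta_n$ acts in the latitude variable $\vartheta$, not in $\varrho'$, so the relevant continuity is that of $(\vartheta,\vartheta')\mapsto\int_{S^1\times S^1}r^2{\rm d}\varrho\,{\rm d}\varrho'\;\overline{h(\varrho)}\,c_\nu P_{s^+}\bigl(-\tfrac{\vec x\cdot\vec y}{r^2}\bigr)h(\varrho')$ at $(0,0)$, not continuity in $\varrho'$---but this is what the integrability of the logarithmic singularity indeed gives, so the argument stands.
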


\section{Foliation of the Euclidean field}
\label{sec:foliations}

Consider a foliation of the upper hemisphere in terms of half-circles
	\[
		R_1 (\theta) I_+ \; , \qquad 0< \theta < \pi \; .
	\]
The \emph{Euclidean sharp-time fields}\index{Euclidean sharp-time field}
\label{stfpage}
	\begin{equation}
		\label{iwasnix}
		\Phi (\theta, h) =\lim_{k\to \infty}\Phi( \delta_{k}(\, .\,- \theta) \otimes  h ) \; , 
		\qquad h \in 
		{ \textstyle {\mathcal D}_{\mathbb R} \left( I_+\right) } \; ,
	\end{equation}
belong to $\bigcap_{1\leq p<\infty} L^{p}({\mathcal Q}, \Sigma, {\rm d} \Phi_{C})$. 
Use (\ref{coideq})   to show that the map
	\begin{equation}
		\label{e1.6bbb}
		\begin{matrix} 
			& S^{1}\times C^\infty_{{\mathbb R}} \left(I_+ \right) &
					\to & \bigcap_{1\leq p<\infty}L^{p}({\mathcal Q}, \Sigma, {\rm d}\Phi_{C})  \\
			& (\theta,  h) & \mapsto & \Phi ( \theta, h) \\ 
		\end{matrix}
	\end{equation}
is continuous. 

\begin{lemma}
\label{1.0b}
The following identity holds on $\bigcap_{1\leq p<\infty}L^{p}({\mathcal Q}, \Sigma, {\rm d}\Phi_{C})$:
	\begin{equation}
	\label{fieldfoliation}
		\int_{S^{1}}  r \cos_\psi \, {\rm d}  \theta  \,   \Phi  (\theta, f_{\theta})=\Phi(f)\; , 
		\quad f_{\theta} \equiv f (\theta, \, . \, ) \in C^\infty_{{\mathbb R}} \left( I_+\right)  \; , \; \; f\in C^\infty_{{\mathbb R}} (S^2) \; . 
	\end{equation}
\end{lemma}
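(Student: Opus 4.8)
The plan is to reduce the claimed identity \eqref{fieldfoliation} to an identity in the one-particle Sobolev space $\mathbb{H}^{-1}_{\mathbb{C}}(S^2)$, using the fact that $f\mapsto\Phi(f)$ is (the restriction of) a continuous linear map from $\mathbb{H}^{-1}(S^2)$ into $\bigcap_{p}L^p({\mathcal Q},\Sigma,{\rm d}\Phi_C)$, established in the lemma following \eqref{e1.6bb}, together with the fact that each sharp-time field $\Phi(\theta,h)$ is by \eqref{iwasnix} the image under this same map of the distribution $\delta_\theta\otimes h\in\mathbb{H}^{-1}_{\mathbb{C}}(S^2)$, where in path-space coordinates $(\delta_\theta\otimes h)(\vec x)=\delta(\theta'-\theta)\,h(\psi)/(r\cos\psi)$ as in the formula preceding Lemma~\ref{coid}. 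Since the map $g\mapsto\Phi(g)$ is linear and continuous, it suffices to verify that
\[
\int_{S^1} r\cos\psi\,{\rm d}\theta\;\bigl(\delta_\theta\otimes f_\theta\bigr) \;=\; f
\]
holds as an identity in $\mathbb{H}^{-1}_{\mathbb{C}}(S^2)$ — more precisely, that the $\mathbb{H}^{-1}$-valued Bochner integral on the left converges and equals the class of $f$. Granting this, \eqref{fieldfoliation} follows by applying the continuous linear map $\Phi(\cdot)$ and interchanging it with the integral (justified by continuity of $\Phi(\cdot)$ on $\mathbb{H}^{-1}$ and Bochner-integrability of $\theta\mapsto\delta_\theta\otimes f_\theta$, which in turn follows from the continuity statement \eqref{e1.6bbb}).

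The core computation is then purely a statement about distributions (equivalently, about the pairing of $\mathbb{H}^{-1}(S^2)$ with $\mathbb{H}^1(S^2)$). First I would recall that on the upper hemisphere $S_+$, using path-space coordinates $(\theta,\psi)$ with surface element ${\rm d}\Omega(\theta,\psi)=r^2\cos\psi\,{\rm d}\theta\,{\rm d}\psi$, one tests against an arbitrary $\varphi\in C^\infty(S^2)$:
\[
\Bigl\langle \int_{S^1} r\cos\psi\,{\rm d}\theta\,(\delta_\theta\otimes f_\theta)\,,\,\varphi\Bigr\rangle
= \int_{S^1} r\cos\psi\,{\rm d}\theta\int \frac{\delta(\theta'-\theta)h_\theta(\psi')}{r\cos\psi'}\,\varphi(\theta',\psi')\,r^2\cos\psi'\,{\rm d}\theta'\,{\rm d}\psi',
\]
and carrying out the $\theta'$-integration against $\delta(\theta'-\theta)$ and simplifying the $\cos\psi'$ factors yields $\int\!\int r^2\cos\psi\,f(\theta,\psi)\varphi(\theta,\psi)\,{\rm d}\theta\,{\rm d}\psi=\langle f,\varphi\rangle$. (One must be slightly careful with the notational collision: the $\cos_\psi$ in \eqref{fieldfoliation} refers to the coordinate $\psi$ of the \emph{foliation leaf}, i.e.\ it is the $\cos\theta$-type Jacobian weight that converts ${\rm d}\theta$ into the correct measure on the hemisphere; I would first pin down the coordinate conventions so that the measure ${\rm d}\Omega$ factorizes correctly, as displayed after \eqref{L2}.) The identity on all of $S^2$ then follows by treating the two hemispheres symmetrically via the time-reflection $T$ of \eqref{deftimerefl}, noting the orientation sign flip between ${\rm d}\varrho$ and ${\rm d}\psi$ recorded after \eqref{eqDeltaTensorh'}, which is exactly compensated by the corresponding sign in the definition of $\delta_\theta\otimes h$ on $I_-$.

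The main obstacle I anticipate is not the formal distributional computation but the \emph{convergence} of the Bochner integral in $\mathbb{H}^{-1}(S^2)$ and the rigor of interchanging $\Phi(\cdot)$ with it: one needs $\theta\mapsto\delta_\theta\otimes f_\theta$ to be a continuous (hence Bochner-integrable, the domain $S^1$ being compact) map into $\mathbb{H}^{-1}(S^2)$, and one needs to know its integral coincides with the distributional integral computed above. The first point follows from the estimate $|C_0(\cdot,\cdot)|$-type bounds — concretely from continuity of \eqref{e1.6bbb} combined with the fact that $\|\delta_\theta\otimes h-\delta_{\theta'}\otimes h\|_{\mathbb{H}^{-1}}\to0$ as $\theta'\to\theta$, which can be read off from the explicit kernel \eqref{coideq2} (it is jointly continuous in $|\theta-\theta'|$ for $h$ fixed). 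The second point is standard: a continuous linear functional on $\mathbb{H}^{-1}(S^2)$ commutes with Bochner integrals, and $C^\infty(S^2)\subset\mathbb{H}^1(S^2)$ separates points of $\mathbb{H}^{-1}(S^2)$, so the weak identity established by testing against $\varphi$ upgrades to the $\mathbb{H}^{-1}$-norm identity. Once these two points are in place, applying the continuous map $\Phi(\cdot)\colon\mathbb{H}^{-1}(S^2)\to L^p$ and using $\Phi(\delta_\theta\otimes f_\theta)=\Phi(\theta,f_\theta)$ delivers \eqref{fieldfoliation}.
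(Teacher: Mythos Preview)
Your strategy is correct and yields a valid proof, but it differs from the paper's argument. The paper does \emph{not} work directly with the singular distributions $\delta_\theta\otimes f_\theta$ in $\mathbb{H}^{-1}(S^2)$. Instead it fixes the smooth approximation $\delta_k$ from \eqref{deltaka}, observes that for each finite $k$ the integrand $\theta\mapsto\Phi\bigl(\delta_k(\cdot-\theta)\otimes f_\theta\bigr)$ is a continuous, bounded $L^p$-valued function on $S^1$, and then computes
\[
\int_0^{2\pi} r\cos\psi\,{\rm d}\theta\;\Phi\bigl(\delta_k(\cdot-\theta)\otimes f_\theta\bigr)=\Phi(\delta_k*f)
\]
by ordinary linearity (both sides involve only smooth test functions). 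The limit $k\to\infty$ is then taken in $L^p$ on both sides: on the right using $\delta_k*f\to f$ in $\mathbb{H}^{-1}(S^2)$ and \eqref{e1.6bb}, on the left using the uniform-in-$\theta$ convergence $\Phi(\delta_k(\cdot-\theta)\otimes f_\theta)\to\Phi(\theta,f_\theta)$. Your approach is more conceptual---it identifies the Bochner-integral identity $\int_{S^1}r\cos_\psi\,{\rm d}\theta\,(\delta_\theta\otimes f_\theta)=f$ in $\mathbb{H}^{-1}$ as the real content and pushes it through the continuous map $\Phi$---while the paper's approximation argument is more elementary, never invoking Bochner integration of distribution-valued functions.

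One point of confusion in your write-up: the discussion of ``treating the two hemispheres symmetrically via the time-reflection $T$'' is unnecessary. In path-space coordinates the range $\theta\in[0,2\pi)$ already parametrizes $S^2\setminus\{(0,\pm r,0)\}$ (the upper hemisphere corresponds to $\theta\in(0,\pi)$, the lower to $\theta\in(\pi,2\pi)$), and the paper's proof accordingly integrates $\theta$ over the full circle without any reflection argument. The sign issue after \eqref{eqDeltaTensorh'} concerns the two half-circles $I_\pm$ of the \emph{equator} in the $\psi$-variable, not the two hemispheres in $\theta$; since $f_\theta\in C^\infty_{\mathbb{R}}(I_+)$ by hypothesis, that issue does not arise here.
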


\begin{proof}
Let $f\in C^\infty_{{\mathbb R}} (S^2)$ and consider the approximation of the 
Dirac $\delta$-function given in~\eqref{deltaka}.  For $k\in {\mathbb N}$ fixed, the map
	\[
		\begin{matrix} 
			&  S^1 & \to &  \mathbb{H}^{-1}(S^2) \\ 
			& \theta & \mapsto &  \delta_{k}(\, .-\theta ) \otimes    f_{\theta} 
		\end{matrix}
	\]
is continuous. Since $f\in C^\infty_{{\mathbb R}} (S^2)$, the expression 
$\|  \delta_{k}(.-\theta ) \otimes   f_{\theta} \|_{\mathbb{H}^{-1}(S^2)}$ is bounded.
Hence by (\ref{e1.6bbb}) the map
	\begin{align*}
		  S^1 & \to  \bigcap_{1\leq p<\infty}L^{p}({\mathcal Q}, \Sigma, {\rm d}\Phi_{C})  \nonumber \\  
		 \theta & \mapsto  \Phi \bigl( \delta_{k}(\, .-\theta ) \otimes   f_{\theta} \bigr) 
	\end{align*}
is continuous and $\bigl\| \Phi \bigl( \delta_{k}(\, .-\theta ) \otimes   f_{\theta}\bigr) \bigr\|_{L^{p}({\mathcal Q}, 
\Sigma, {\rm d}\Phi_{C})}$ is bounded.  Therefore 
	\[
		 \int_{0}^{2\pi} r \cos \psi  \, {\rm d} \theta \; \Phi \bigl( \delta_{k}(\, .-\theta ) \otimes   f_{\theta} \bigr)  
	\]
is well defined as an element of $\bigcap_{1\leq p<\infty}L^{p}({\mathcal Q}, \Sigma, {\rm d}\Phi_{C})$. Moreover,  
	\begin{align*}
 		\int_{0}^{2\pi} r \cos \psi  \,  {\rm d} \theta  \;  \Phi \bigl( \delta_{k}(\, .-\theta ) \otimes   f_{\theta}\bigr)
		&= \Phi \Bigl(\int_{0}^{2\pi} r \cos \psi  \,   {\rm d}  \theta \;  (\delta_{k}(\, .-\theta ) \otimes    f_{\theta} ) \Bigr)
		= \Phi( \delta_{k}* f) \; ,
	\end{align*}
where the convolution product $*$ acts only in the variable $\theta$. Use  \eqref{eqDeltaTensorh'}  and 
	\[
		 \lim_{k\to \infty} \delta_{k} *   f 
		 = \lim_{k\to \infty} \int_{0}^{2\pi}  r {\rm d} \theta \; r^{-1} \delta_{k} (\, .\, - \theta)    f(\theta , \, .\, ) 
		 =  f 
	\]
in~$\mathbb{H}^{-1}(S^2)$ to obtain from (\ref{e1.6bb}) 
	\[
		\lim_{k\to \infty} \int_{0}^{2\pi} r \cos \psi \,  {\rm d}  \theta  \; \Phi \bigl( \delta_{k}(\, .-\theta ) \otimes    
		 f_{\theta} \bigr)\, 
		=
		\Phi(f) \quad \hbox{in} \:  \bigcap_{1\leq p<\infty}L^{p}({\mathcal Q}, \Sigma, {\rm d}\Phi_{C}) \; .
	\] 
Furthermore, it follows from (\ref{e1.6bb}) that 
	\[
		 \lim_{ k\to \infty}\sup_{\theta \in S^1} 
		 		\bigl\|\Phi \bigl( \delta_{k}(\, .-\theta ) \otimes    f_{\theta} \bigr)
					- \Phi (\theta, f_{\theta}) \bigr\|_{L^{p}({\mathcal Q}, \Sigma, {\rm d}\Phi_{C})} 
		= 0 
	\]
for $f\in C^\infty(S^2)$. Hence \eqref{fieldfoliation} follows. 
\end{proof}
\goodbreak

\chapter{Non-Gaussian Measures}

\section{Wick ordering of random variables}
\label{sec2.2}
Recall the normal ordering of
Gaussian random variables: Let~$({\mathcal Q}, \mu)$ be a probability space and  $X$ 
a real vector space equipped with a positive quadratic form~$f\mapsto
c(f,f)$, \emph{i.e.},  a {\em covariance}. Let $f\mapsto \Phi(f)$ be a ${\mathbb R}$-linear map from~$X$
into the space of real measurable functions on ${\mathcal Q}$.
{\em Normal ordering} ${:} \Phi(f)^{n}{:}_c$ with respect to a
covariance $c$ is defined by 
\label{wickpage}
	\begin{equation}
		\label{wick}
			{:} \Phi(f)^{n}{:}_{c} 
				= \sum_{m=0}^{[n/2]}\frac{n!}{m!(n-2m)!}
					\Phi(f)^{n-2m} \Bigl(-\tfrac{1}{2}  c(f,f) \Bigr)^{m} \; ,
	\end{equation}
where $[.]$ denotes the integer part. 

Let, for~$m \in {\mathbb N}$,
	\[  
	 \delta^{(2)}_{m}(\, .-\theta, \, .- \psi)
	 =	\frac{1}{r^2}
	 \sum_{\ell = 0}^m \sum_{k= - \ell}^{\ell} \overline{Y_{\ell, k} (\theta, \psi)} 
	 Y_{\ell, k} (\, . \, , \, . \, ) \;   , 
	 \qquad \vec x \equiv \vec x (\theta, \psi) \in S^2 \; , 
	\] 
if $\vec x \ne (0, \pm r , 0)$. It follows from  \eqref{F-S2} that 
$ \bigl\{ \delta^{(2)}_m \bigr\}_{m \in {\mathbb N}}$ approximates
the  two-dimensional  
Dirac $\delta$-function $\delta_m^{(2)}$ for $m \to \infty$ unless  $\vec x = (0, \pm r, 0)$.
Note that $\delta^{(2)}$ is  supported at the point 
$(0,0,r) \in S^2$  and $\int_{S^2} {\rm d} \Omega \, \delta^{(2)}  = 1$.

\begin{theorem}[Ultraviolet renormalization]
\label{uvtheo}
For $n \in {\mathbb N}$ and $f\in L^{2}(S^2, {\rm d} \Omega)$
the following  limit exists  in $\kern -.2cm \bigcap \limits_{1\leq p<\infty} \kern -.2cm L^{p}({\mathcal Q}, \Sigma, {\rm d}\Phi_{C})$:
	\[
		\lim_{m \to \infty}\int_0^{2 \pi} r \, {\rm d} \theta \int_{-\pi/2}^{\pi/2} r \cos \psi \, {\rm d} \psi \;  f(\theta,\psi) \; 
		{:}\Phi \bigl(  \delta^{(2)}_{m}(\, .-\theta, \, .- \psi)  \bigr)^n {:}_{C} \; .
	\]
It is denoted by $\int_{S^2} {\rm d} \Omega \; f(\theta,\psi)  \; {:} \Phi( \theta,\psi)^{n}{:}_{C}$. 
\end{theorem}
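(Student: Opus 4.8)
The plan is to reduce the statement to an $L^{2}$-Cauchy estimate for the net of smeared Wick powers
\[
	\Phi_{m}^{(n)}(f) \doteq \int_{S^{2}} {\rm d}\Omega(x)\; f(x)\; {:}\Phi\bigl(\delta^{(2)}_{m}(\,.-x)\bigr)^{n}{:}_{C} \; ,
\]
and then to upgrade $L^{2}$-convergence to convergence in $\bigcap_{1\le p<\infty}L^{p}({\mathcal Q},\Sigma,{\rm d}\Phi_{C})$ by the equivalence of all $L^{p}$-norms, $1\le p<\infty$, on the $n$-th Wiener chaos: any random variable lying in the closed span of $\{{:}\Phi(g)^{n}{:}_{C}\mid g\in\mathbb{H}^{-1}(S^{2})\}$ satisfies $\|X\|_{p}\le (p-1)^{n/2}\|X\|_{2}$ by Nelson's hypercontractivity (see \cite{S},\cite{GJ}), so an $L^{2}$-Cauchy sequence in that chaos is $L^{p}$-Cauchy for every $p<\infty$. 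First I would check that $\Phi_{m}^{(n)}(f)$ is a well-defined Bochner integral: the map $x\mapsto{:}\Phi(\delta^{(2)}_{m}(\,.-x))^{n}{:}_{C}$ is continuous from $S^{2}\setminus\{(0,\pm r,0)\}$ into $L^{p}$ with $\|{:}\Phi(\delta^{(2)}_{m}(\,.-x))^{n}{:}_{C}\|_{2}^{2}=n!\,C_{m}(x,x)^{n}$, where, using that $\delta^{(2)}_{m}$ is the projection onto angular momenta $\le m$ and Proposition \ref{3.9a},
\[
	C_{m}(x,y)\doteq\bigl\langle\delta^{(2)}_{m}(\,.-x),(-\Delta_{S^{2}}+\mu^{2})^{-1}\delta^{(2)}_{m}(\,.-y)\bigr\rangle_{L^{2}(S^{2},{\rm d}\Omega)}=\frac{1}{4\pi}\sum_{\ell\le m}\frac{2\ell+1}{\ell(\ell+1)+\mu^{2}r^{2}}\,P_{\ell}\bigl(\tfrac{\vec x\cdot\vec y}{r^{2}}\bigr)
\]
is the cut-off covariance, and $C_{m}(x,x)=\frac{1}{4\pi}\sum_{\ell\le m}\frac{2\ell+1}{\ell(\ell+1)+\mu^{2}r^{2}}$ is finite for each fixed $m$ (growing like $\tfrac{1}{2\pi}\log m$).

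\textbf{The Cauchy estimate.} By the diagram formula for Wick-ordered Gaussian variables one has $\int_{\mathcal Q}{\rm d}\Phi_{C}\,{:}\Phi(g_{1})^{n}{:}_{C}\,{:}\Phi(g_{2})^{n}{:}_{C}=n!\,C(g_{1},g_{2})^{n}$ with $C(g_{1},g_{2})=\langle g_{1},g_{2}\rangle_{\mathbb{H}^{-1}(S^{2})}$; since the $\delta^{(2)}_{m}$ form a nested family of orthogonal projections commuting with $(-\Delta_{S^{2}}+\mu^{2})^{-1}$, one gets for $m\le m'$
\[
	\bigl\|\Phi_{m'}^{(n)}(f)-\Phi_{m}^{(n)}(f)\bigr\|_{2}^{2}=n!\int_{S^{2}\times S^{2}}{\rm d}\Omega(x){\rm d}\Omega(y)\;\overline{f(x)}f(y)\,\bigl(C_{m'}(x,y)^{n}-C_{m}(x,y)^{n}\bigr) \; .
\]
Thus everything reduces to showing that $\int_{S^{2}\times S^{2}}{\rm d}\Omega\,{\rm d}\Omega\;|f(x)||f(y)|\,|C_{m'}(x,y)^{n}-C_{m}(x,y)^{n}|\to0$ as $m,m'\to\infty$. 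I would bound the integrand by $n\,|C_{m'}-C_{m}|\cdot\max(|C_{m}|,|C_{m'}|)^{n-1}$ and apply generalized H\"older on $S^{2}\times S^{2}$, treating $|f(x)||f(y)|$ as an element of $L^{2}(S^{2}\times S^{2})$ of norm $\|f\|_{L^{2}(S^{2})}^{2}$ (which is exactly why the hypothesis $f\in L^{2}(S^{2})$ suffices): with conjugate exponents $p,p'$,
\[
	\le n\,\|f\|_{L^{2}(S^{2})}^{2}\;\|C_{m'}-C_{m}\|_{L^{2p}(S^{2}\times S^{2})}\;\bigl(\|C_{m}\|_{L^{2p'(n-1)}(S^{2}\times S^{2})}+\|C_{m'}\|_{L^{2p'(n-1)}(S^{2}\times S^{2})}\bigr)^{n-1} \; .
\]
The two inputs needed are then (i) a uniform bound $\sup_{m}\|C_{m}\|_{L^{q}(S^{2}\times S^{2})}<\infty$ for every finite $q$, and (ii) $\|C_{m'}-C_{m}\|_{L^{q}(S^{2}\times S^{2})}\to0$ as $m,m'\to\infty$. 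Both follow from Theorem \ref{c-log} --- the singularity of $C$ at coinciding points is only logarithmic, hence in $L^{q}$ for all $q<\infty$, and $\|C^{(k)}-C\|_{L^{q}}\to0$ --- provided one transfers those estimates from the heat-kernel regularization $C^{(k)}$ to the sharp angular-momentum cut-off $C_{m}=\delta_{m}C\delta_{m}$; alternatively (i)--(ii) can be obtained directly from the explicit Gegenbauer series above, using $\frac{2\ell+1}{\ell(\ell+1)+\mu^{2}r^{2}}=\frac{2}{\ell}+O(\ell^{-2})$ together with standard estimates on partial sums of Legendre series on $[-1,1]$.

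\textbf{Main obstacle.} The genuine difficulty is the interplay between the logarithmic growth $C_{m}(x,x)=O(\log m)$ --- so that $\|C_{m}\|_{\infty}\to\infty$ and no sup-norm control is available --- and the rate at which $C_{m}\to C$: one must not estimate $C_{m'}^{n}-C_{m}^{n}$ by sup norms but instead exploit that in two space-time dimensions the short-distance singularity is mild enough to be $L^{q}$-integrable for every finite $q$, which is precisely what makes Wick powers of arbitrary order $n$ exist. A secondary, more bookkeeping-type obstacle is that Theorem \ref{c-log} is stated for the multiscale cut-off $C^{(k)}$ rather than for $\delta_{m}C\delta_{m}$, so the covariance estimates have to be re-derived (or transferred) for the sharp cut-off used here; for $n=1$ the whole argument already shows $\Phi_{m}^{(1)}(f)\to\Phi(f)$, recovering the Lemma after \eqref{e1.6bb}. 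Once the Cauchy property in $L^{2}$ is established, the existence of the limit, its membership in $\bigcap_{1\le p<\infty}L^{p}$, and the identification with $\int_{S^{2}}{\rm d}\Omega\,f\,{:}\Phi^{n}{:}_{C}$ are immediate from the chaos-norm equivalence.
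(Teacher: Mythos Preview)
Your reduction to an $L^{2}$-Cauchy estimate followed by the hypercontractive upgrade to all $L^{p}$ is correct and is exactly what the paper does for the final step. The difference lies in how the $L^{2}$ estimate is obtained.

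The paper works in the Fock realization $L^{2}({\mathcal Q},\Sigma,{\rm d}\Phi_{C})\cong\Gamma(\mathbb{H}^{-1}_{\mathbb C}(S^{2}))$: it expands ${:}\Phi(\delta^{(2)}_{m})^{n}{:}_{C}$ in spherical harmonics as a sum of Wick monomials in $a^{*}(Y_{\ell,k}),a(Y_{\ell,k})$, applies the result to the Fock vacuum $\Omega_{E}$ (so that only the pure creation term $j=n$ survives), and then bounds the $\Gamma^{(n)}(\mathbb{H}^{-1})$-norm of the resulting $n$-particle vector. The key combinatorial step is the inequality $\prod_{i=1}^{n}\ell_{i}^{-1}\le\sum_{p=1}^{n}\prod_{i\ne p}\ell_{i}^{-n/(n-1)}$ (arithmetic--geometric mean), after which one of the $\ell$-sums is controlled by Plancherel for $f$ (this is where $f\in L^{2}(S^{2})$ enters) and the remaining $(n-1)$ sums are summable because $n/(n-1)>1$. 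No $L^{q}$ bounds on $C_{m}$ are ever used.

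Your configuration-space route via the Wick formula and H\"older on $S^{2}\times S^{2}$ is the classical Glimm--Jaffe/Simon alternative and is perfectly sound in principle, but the point you label as ``bookkeeping'' is more than that. For $n\ge2$ your H\"older chain forces $q=2n\ge4$ for the norms $\|C_{m}\|_{L^{q}}$ and $\|C_{m'}-C_{m}\|_{L^{q}}$, and Legendre partial sum operators are bounded on $L^{q}[-1,1]$ only for $4/3<q<4$ (Pollard). So the uniform $L^{q}$ bound and the $L^{q}$-Cauchy property of the \emph{sharp} angular-momentum truncations $C_{m}$ do not follow from general theory; they require exploiting the specific zonal structure and the mild (logarithmic) diagonal singularity of $C$, e.g.\ via Hilb's asymptotics $P_{\ell}(\cos\gamma)\approx J_{0}(\ell\gamma)$ near $\gamma=0$ combined with a dyadic splitting of the $\gamma$-range. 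This is doable but is a genuine additional argument; the paper's Fock-space proof bypasses it entirely by never touching $\|C_{m}\|_{L^{q}}$.
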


\begin{proof} 
Use the identification of $L^{2}(Q, \Sigma, \mu)$ with
$\Gamma(\mathbb{H}^{-1} (S^2))$ given in \eqref{dfock}. Then Wick ordering
with respect to $C$ coincides with Wick ordering with
respect to the Fock vacuum on $\Gamma(\mathbb{H}^{-1} (S^2))$ and 
	\[
		{:} \Phi(g)^{n} {:}_{C} = \frac{1}{2^{n/2}} 
							\sum_{j=0}^n \begin{pmatrix} 
								n \\ 
								j
							\end{pmatrix} 
						a^*(g)^j a(g)^{n-j} \; , \qquad g \in  \mathbb{H}^{-1} (S^2) \; . 
	\]
In particular,
	\begin{align}
		{:} \Phi(\delta^{(2)}_{m})^{n} {:}_{C} & = \frac{1}{ 2^{\frac{n}{2}} r^{2n} } 
							\sum_{j=0}^n \begin{pmatrix} 
								n \\ 
								j
							\end{pmatrix} 
	 \sum_{\ell_1 = 0}^m \sum_{k_1= - \ell_1}^{\ell_1} 
	 \cdots
	 \sum_{\ell_n = 0}^m \sum_{k_n= - \ell_n}^{\ell_n} \nonumber \\
	 & \qquad  \overline{Y_{\ell_1, k_1} (\theta, \psi) } \cdots \overline{Y_{\ell_j, k_j} (\theta, \psi)} 
	Y_{\ell_{j+1}, k_{j+1}} (\theta, \psi) \cdots Y_{\ell_{n}, k_{n}} (\theta, \psi) \nonumber \\
	& \qquad 	\quad \times	a^*(Y_{\ell_1, k_1}) \ldots a^*(Y_{\ell_j, k_j}) 
						a(Y_{\ell_{j+1}, k_{j+1}} ) \ldots a(Y_{\ell_{n}, k_{n}}) \; . 
	\label{sum-m-1}
	\end{align}
Using
	$
		\overline{ Y_{\ell, k} (\theta, \psi)} = (-1)^k Y_{\ell, - k} (\theta, \psi) 	$,
we find that (see, \emph{e.g.}, \cite{SHK} or \cite[Sect.~6]{DG} for a recent survey) 
	\[
		P^{(n)}_m (f) \doteq \int_0^{2 \pi} r \, {\rm d} \theta \int_{-\pi/2}^{\pi/2} r \cos \psi \, {\rm d} \psi \;  f(\theta,\psi) \; 
		{:}\Phi \bigl(  \delta^{(2)}_{m}(\, .-\theta, \, .- \psi)   \bigr)^n {:}_{C}  
	\]
is a linear combination of Wick monomials of the form 
	\begin{align}
		& 		\sum_{j=0}^{n}  \begin{pmatrix} 
								n \\ 
								j
							\end{pmatrix}  \sum_{\ell_1 = 0}^m \sum_{k_1= - \ell_1}^{\ell_1} 
	 \cdots
	 \sum_{\ell_n = 0}^m \sum_{k_n= - \ell_n}^{\ell_n}\;  (-1)^{\sum_{i=1}^j k_i} 
		\nonumber \\
		& \qquad  \qquad    \qquad  \qquad
		\qquad  \times		 w^{(n)} (\ell_{1}, k_1, \ldots, \ell_{j}, k_{j}, 
		\ell_{j+1}, k_{j+1}, \ldots, \ell_{n}, k_{n}) 
		\nonumber \\
		& \qquad  \qquad  \qquad \qquad  \qquad  \qquad
		\qquad   \times		a^{*}_{\ell_1, k_1} \cdots a^{*}_{\ell_j, k_j}
				a_{\ell_{j+1}, - k_{j+1}} \cdots a_{\ell_{n}, - k_{n}} \; , 
	\label{sum-m-2a}
	\end{align}
where $a^{(*)}_{\ell_i, k_i} \equiv a^{(*)} (Y_{\ell_i, k_i})$ and 
	\begin{align*}
		w^{(n)} (\ell_{1}, & k_1, \ldots,  \ell_{j}, k_{j}, 
		\ell_{j+1}, k_{j+1}, \ldots, \ell_{n}, k_{n}) \\
		& = \frac{1}{ 2^{\frac{n}{2}} r^{2n} } 
		\int_0^{2 \pi} r \, {\rm d} \theta \int_{-\pi/2}^{\pi/2} r \cos \psi \, {\rm d} \psi \;  f(\theta,\psi) \; \prod_{i=1}^n
			 \overline{Y_{\ell_i, k_i} (\theta, \psi)}   \; .
	\end{align*}
Next we apply the Wick monomials \eqref{sum-m-2a} 
to the Fock vacuum $\Omega_E$ in the Fock space over the one-particle space ${\mathbb H}^{-1} (S^2)$; 
see Definition \ref{sobolev-S2}. Only the term with $j=n$ contributes in the sum over $j$. 
However, when removing the cut-off (\emph{i.e.}, for $m \to \infty$) there are still $n$ infinite double sums. 
The ultraviolet divergencies, stemming from high (angular) momenta, are 
now visible from the sum  
	\[ 
		\sum_{k_i= - \ell_i}^{ \ell_i } \left( \ell_i + \tfrac{1}{2} \right)^{-2}
	\]
over the Fourier coefficients, which goes like $l_i^{-1}$ for large $l_i \in \mathbb{N}$.
However, we can use the bound \cite[Lemma 6.1]{DG}
	\begin{equation}
	\label{sum-n}
		\prod_{i=1}^{n } l_i^{-1} \le \sum_{p=1}^n \left( \prod_{i\ne p} l_i^{-\frac{n}{n-1}} \right)
	\end{equation} 
which follows from $\left( \prod_{p=1}^{n } \lambda_p \right)^{1/n}  \le \sum_{p=1}^n \lambda_p $, 
applied to $\lambda_p = \prod_{i\ne p} l_i^{-\frac{n}{n-1}}$. 
The remaining sum over the $l_p$'s can be estimated using the
Plancherel theorem for the expansion into spherical harmonics which ensure that 
	\begin{equation}
	\label{first-sum}
		\sqrt{ \sum_{l_p=0}^{\infty} \sum_{m_p=- l_p}^{l_p}  | \widetilde f_{l_p, m_p}|^2 } = \| f \|_{L^2(S^2 , {\rm d} \Omega)} \; . 
	\end{equation}
We conclude that $P^{(n)}_m (f) \Omega$ 
converges to a vector $P^{(n)}_\infty (f) \Omega$ in $\Gamma^{(n)} (\mathbb{H}^{-1} (S^2))$,
or equivalently that $P^{(n)}_m (f)$ converges to $P^{(n)}_\infty (f)$ in $L^{2}( Q, \Sigma, \mu)$. 
Since $P^{(n)}_m (f)\Omega$ is a finite particle vector, it follows from a standard argument (see, \emph{e.g.},
\cite[Theorem~1.22]{S} or \cite[Lemma 5.12]{DG}) that 
	\[
		P^{(n)}_m (f) \to P^{(n)}_\infty (f)\in L^{p}(Q, \Sigma_0, \mu)
	\] 
for all $1\leq p<\infty$.
\end{proof}

If ${\mathscr P}={\mathscr P}(\lambda)$ is a real valued polynomial, then 
\label{interactionspherepage}
	\[
		\int_{S^2}   {\rm d} \Omega \; f(\theta,\psi) \; {:} {\mathscr P}(\Phi( \vec x )) {:}_{C}   \; , 
		\qquad \vec x  \equiv  \vec x (\theta,\psi) \in S^2 \; , 
	\]
is well defined, by linearity, for $f \in L^2 (S^2, {\rm d} \Omega)$.
On subsets $K \subset S^2$ with non-empty interiors the interaction is defined by
	\[
		V  (K)=  \int_{K} {\rm d} \Omega \;  {:} {\mathscr P} (\Phi( \vec x )) { :}_{C} \; ,
		\qquad K \subset S^2 \; .
	\]
$V  (K)$ is a densely defined operator, but it is unbounded from below. 

If, in addition, ${\mathscr P}$ is bounded from below, then the function ${\rm e}^{- tV (K)}$ is in $L^1$ for any 
$t>0$ \cite[Lemma 3.15]{SHK}; see also
\cite{G}\cite{Se}. In particular, one has 
	\begin{equation}
	 	\label{intL1}
		{\rm e}^{- V (S^2)}  \in L^{1}({\mathcal Q}, \Sigma, {\rm d}\Phi_{C}) \; . 
	 \end{equation}
Hence one can  define the {\em perturbed
measure ${\rm d}\mu_{V}$ on the sphere}:
	\begin{equation}	 
		\label{pmeasure}
		{\rm d}\mu_{V}= \frac{{\rm e}^{- V (S^2) }{\rm d}\Phi_{C}}{\int_{{\mathcal Q}}{\rm d}\Phi_{C} \; {\rm e}^{- V (S^2) }} \; . 
	\end{equation} 

\section{Sharp-time interactions}
\label{sec:3.5}

The results of this section will be used  to define Feynman-Kac-Nelson kernels  in Section \ref{FKN}. 

\begin{lemma}
\label{wickooo} 
The following limit exists  
in $\bigcap_{1\leq p<\infty}L^{p}({\mathcal Q}, \Sigma, {\rm d}\Phi_{C})$:
\[
 \lim_{k\to \infty}\int_{S^1} r {\rm d} \psi \;  h (\psi)\,  {:}\Phi(0, \underline{\delta}_{k}(.-\psi))^{n} {:}_{C_{0}} \; , 
 \qquad h \in L^2 ( S^1, {\rm d} \psi) \; .  
 \]
It is denoted by $\int_{S^1}  r {\rm d} \psi \;  h (\psi)\,  {:}\Phi(0, \psi)^{n} {:}_{C_{0}} $.
\end{lemma}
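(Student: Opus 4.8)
The plan is to reduce this to the one-dimensional analogue of Theorem~\ref{uvtheo} (the ultraviolet renormalization on $S^2$), but now working on the circle $S^1$ with the time-zero covariance $C_0$ in place of the full covariance $C$. First I would fix $n\in\mathbb{N}$ and $h\in L^2(S^1,{\rm d}\psi)$, and set
\[
	Q^{(n)}_k(h) \doteq \int_{S^1} r\,{\rm d}\psi\; h(\psi)\; {:}\Phi\bigl(0,\underline{\delta}_k(\,.-\psi)\bigr)^n{:}_{C_0}\; ,
\]
where $\underline{\delta}_k$ is the approximate identity on $S^1$ built from the eigenfunctions of $\varepsilon^2$ (equivalently, the Fourier modes $e^{ik\psi}/\sqrt{2\pi r}$ used throughout Section~\ref{Sect: canon-HS}). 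By the identification \eqref{dfock} of $L^2({\mathcal Q},\Sigma,{\rm d}\Phi_C)$ with the Fock space $\Gamma(\mathbb{H}^{-1}(S^2))$, together with the fact (Corollary~\ref{wichtig} and the preceding remark) that the closure of the sharp-time distributions $\delta\otimes h$ inside $\mathbb{H}^{-1}(S^2)$ is exactly $\widehat{\mathfrak h}(S^1)$ with norm governed by $\tfrac{1}{2\omega}$, Wick ordering with respect to $C_0$ coincides with Wick ordering with respect to the Fock vacuum on $\Gamma(\widehat{\mathfrak h}(S^1))\subset\Gamma(\mathbb{H}^{-1}(S^2))$. Hence $Q^{(n)}_k(h)$ is a finite linear combination of Wick monomials $a^*\cdots a^*a\cdots a$ with kernels built from the Fourier coefficients of $h$ against $\widetilde\omega(k)^{-1/2}$ factors.

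The second step is the convergence estimate, which is the analogue of the bound \eqref{sum-n}--\eqref{first-sum} in the proof of Theorem~\ref{uvtheo}. Applying $Q^{(n)}_k(h)\Omega_E$ to the Fock vacuum leaves only the pure-creation term, i.e.\ a vector in $\Gamma^{(n)}\bigl(\widehat{\mathfrak h}(S^1)\bigr)$ whose norm-squared is an $n$-fold Fourier sum weighted by $\prod_{i=1}^n \widetilde\omega(k_i)^{-1}$. The crucial input is the large-$|k|$ behaviour of $\widetilde\omega(k)$: by \eqref{eq:tilde-omega-and-the-traditional-dispersion-relation} one has $\widetilde\omega(k)\sim\sqrt{k^2/r^2+\mu^2}$, so $\widetilde\omega(k)^{-1}=O(|k|^{-1})$ — the same mild decay that produces the (integrable, logarithmic) ultraviolet singularity on the sphere. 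One then invokes the elementary inequality $\bigl(\prod_p\lambda_p\bigr)^{1/n}\le\sum_p\lambda_p$ exactly as in \eqref{sum-n} to factor the $n$-fold sum into a sum of products of single sums $\sum_{k}|k|^{-n/(n-1)}$ (which converge for $n\ge2$; the cases $n=0,1$ are trivial), times a single factor carrying $|\widehat h(k)|^2$, whose sum is $\|h\|^2_{L^2(S^1,{\rm d}\psi)}$ by Plancherel. This shows $Q^{(n)}_k(h)\Omega_E$ is Cauchy in $\Gamma^{(n)}\bigl(\widehat{\mathfrak h}(S^1)\bigr)$, hence converges to some $Q^{(n)}_\infty(h)\Omega_E$; equivalently $Q^{(n)}_k(h)\to Q^{(n)}_\infty(h)$ in $L^2({\mathcal Q},\Sigma,{\rm d}\Phi_C)$.

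The third and final step is to upgrade $L^2$-convergence to $L^p$-convergence for all $1\le p<\infty$. Since each $Q^{(n)}_k(h)$ lies in the $n$-th (inhomogeneous) Wiener chaos, this is the standard hypercontractivity / finite-particle-vector argument: on a fixed Wiener chaos all $L^p$-norms are equivalent (see \cite[Theorem~1.22]{S} or \cite[Lemma~5.12]{DG}), and since $Q^{(n)}_k(h)\Omega_E$ is a finite-particle vector converging in Fock space, the same estimate applied to differences $Q^{(n)}_k(h)-Q^{(n)}_{k'}(h)$ gives a Cauchy sequence in every $L^p$. The limit is denoted $\int_{S^1}r\,{\rm d}\psi\;h(\psi)\,{:}\Phi(0,\psi)^n{:}_{C_0}$, and linearity in $h$ together with density of $C^\infty(S^1)$ in $L^2(S^1,{\rm d}\psi)$ extends the statement from smooth $h$ to all $h\in L^2$.

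I expect the main obstacle to be purely bookkeeping: making the passage from Wick ordering with respect to $C_0$ (a covariance on $S^1$) to Fock-vacuum Wick ordering on $\Gamma(\widehat{\mathfrak h}(S^1))$ completely rigorous, in particular checking that $\underline\delta_k\otimes h$ really does converge in $\mathbb{H}^{-1}(S^2)$ and that the diagonal singularity of $C_0(\psi,\psi')=c_\nu P_{s^+}(-\cos(\psi-\psi'))$ is integrable (which follows from the remark after Lemma~\ref{3.9}, since the Legendre-function singularity at coinciding points is logarithmic). Once the combinatorial bound is set up exactly as in Theorem~\ref{uvtheo}, the estimate $\widetilde\omega(k)^{-1}=O(|k|^{-1})$ does all the analytic work, so there is genuinely no new difficulty beyond transcribing the sphere argument to the circle.
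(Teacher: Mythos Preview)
Your proposal is correct and follows exactly the approach the paper intends: the paper's own proof is simply ``See the proof of Theorem~\ref{uvtheo},'' and you have spelled out precisely that transcription from the sphere to the circle, with the decay $\widetilde\omega(k)^{-1}=O(|k|^{-1})$ replacing the $(\ell+\tfrac12)^{-2}$ weights and the same combinatorial bound~\eqref{sum-n}, Plancherel, and hypercontractivity steps. Your sketch is in fact considerably more detailed than what the paper provides.
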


\begin{proof} 
See the proof of Theorem \ref{uvtheo}. 
\end{proof}

Thus for ${\mathscr P}$ a real valued polynomial,  the expression
	\label{vcospage}
	\begin{equation}
		\label{bbv-interaction}
		{V_0} (h) = \int_{S^1}  r {\rm d} \psi \;  h(\psi) \, {:} {\mathscr P}(\Phi(0,\psi)) {:}_{C_{0}}\; ,
		 \qquad h \in L^2 (S^1, r {\rm d} \psi)\; ,
	\end{equation}
is well defined, by linearity.  
The  interaction\footnote{The interaction $V^{(\alpha)} $ 
is defined by rotating $V^{(0)}$ around the $x_0$-axis by an angle $\alpha$, 
see Equ.~\eqref{fkn17}, Section \ref{FKN}.}
	\begin{equation}
		\label{v0newdefinition}
		V^{(0)}   \doteq V_0 ( \cos_\psi \chi_{ I_+  }) \; , 
	\end{equation}
with $\chi_{ I_+ }$  the characteristic function of the interval
$I_+ \subset S^1$,  
has two  equivalent meanings: first of all (see Lemma~\ref{wickooo}), 
it can be viewed as a $\Sigma^{(0)}$-measurable function 
	\[
		V^{(0)}  \in \bigcap_{1\leq p<\infty}L^{p}({\mathcal Q}, \Sigma^{(0)}, {\rm d}\Phi_{C}) \; . 
	\]
Here $\Sigma^{(0)}$ is the smallest sub $\sigma$-algebra of $\Sigma$ for which the functions 
$\{ \Phi(0,h) \mid h\in {\mathcal D}_{{\mathbb R}} ( I_+ ) \} $ are measureable. 

Secondly,~$V^{(0)} $~can be considered, applying \eqref{hos}, 
as a self-adjoint operator affiliated to the abelian von
Neumann algebra ${\mathcal U} (S^1) $ acting on the Hilbert space~${\mathcal H}$.  
This second possibility will be discussed in Section \ref{interactingdesitter}.

\section{Foliations of the interaction}
\label{sec1.3}
The $(2 \pi)$-periodic one-parameter group 
$ [0, 2 \pi)   \ni \theta  \mapsto  R^{(\alpha)} (\theta)_* $ induces a representation 
	\begin{equation}
		\label{wawa}
		\theta \mapsto {\rm U} ^{(\alpha)} (\theta) 
	\end{equation}
of $U(1)$ in terms of  automorphisms of $L^{\infty}({\mathcal Q}, \Sigma, {\rm d} \Phi_C)$, 
which extends to a strongly
continuous representation in terms of isometries of $L^{p}({\mathcal Q}, \Sigma, {\rm d} \Phi_C)$. 

\begin{lemma} 
\label{2.2}
For $h\in L^{2} \bigl( I_+ \, , r {\rm d} \psi \bigr)$ and $g \in C^\infty_{{\mathbb R}} (S^{1})$
	\begin{equation}  
		\int_{S^{1}} r \, {\rm d} \theta \; g(\theta) ({\rm U}^{(0)} (\theta) {V_0} )(\cos_\psi h)  
		= \int_{S^2} {\rm d} \Omega \;   g(\theta)  h(\psi) \; {:}{\mathscr P}(\Phi(\theta,\psi)) {:}_{C}
		\label{e1.11}
	\end{equation}
as functions on $ {\mathcal Q}$. 
\end{lemma}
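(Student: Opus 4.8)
The plan is to verify the identity \eqref{e1.11} by first establishing it at the level of smoothed (ultraviolet cutoff) Wick powers, and then removing the cutoff using the convergence results already proven. First I would fix $h \in L^2(I_+, r\,{\rm d}\psi)$ and $g \in C^\infty_{\mathbb R}(S^1)$, and recall the two key facts at my disposal: (i) the definition \eqref{v0newdefinition}--\eqref{bbv-interaction} of $V_0(\cos_\psi h)$ as the $L^p$-limit, over $k\to\infty$, of $\int_{S^1} r\,{\rm d}\psi\; \cos\psi\, h(\psi)\, {:}\Phi(0,\underline\delta_k(.-\psi))^n{:}_{C_0}$ (and linearly for general $\mathscr P$), and (ii) the action of the rotation automorphism ${\rm U}^{(0)}(\theta)$, which by definition is $\Gamma\big((R^{(0)}(\theta))_*\big)$ lifted to $L^\infty({\mathcal Q},\Sigma,{\rm d}\Phi_C)$, and which is an isometry on each $L^p$. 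The point is that ${\rm U}^{(0)}(\theta)$ intertwines the time-zero field at angle $0$ with the sharp-time field at angle $\theta$: applying ${\rm U}^{(0)}(\theta)$ to the sharp-time smeared field $\Phi(0,\underline\delta_k(.-\psi))$ yields $\Phi(\theta,\underline\delta_k(.-\psi))$ because $R^{(0)}(\theta)$ maps the half-circle $I_+$ (living at $\theta=0$ in path-space coordinates \eqref{ps-cord}) to the half-circle $R^{(0)}(\theta)I_+$ at parameter $\theta$, and the covariance $C_0$ is rotation-invariant (Lemma~\ref{3.9}), so Wick ordering is preserved. Hence ${\rm U}^{(0)}(\theta)V_0(\cos_\psi h)$ equals $\int_{S^1} r\,{\rm d}\psi\;\cos\psi\,h(\psi)\,{:}\Phi(\theta,\underline\delta_k(.-\psi))^n{:}_{C}$ in the $k\to\infty$ limit (the covariance becomes $C$ rather than $C_0$ after the rotation since the argument is now genuinely two-dimensional, but along the fixed half-circle these agree by Lemma~\ref{coid} and Corollary~\ref{wichtig}).

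Next I would integrate over $\theta$ against $g(\theta)$. On the left-hand side of \eqref{e1.11} we get $\int_{S^1} r\,{\rm d}\theta\; g(\theta)\,{\rm U}^{(0)}(\theta)V_0(\cos_\psi h)$, which by the previous paragraph and Fubini (justified because the $L^p$-valued map $\theta\mapsto {\rm U}^{(0)}(\theta)V_0(\cos_\psi h)$ is strongly continuous and bounded, $g$ being smooth on the compact $S^1$) equals $\lim_{k\to\infty}\int_{S^1} r\,{\rm d}\theta\int_{S^1}r\,{\rm d}\psi\; g(\theta)\cos\psi\,h(\psi)\,{:}\Phi(\theta,\underline\delta_k(.-\psi))^n{:}_{C}$. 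Comparing with the surface element ${\rm d}\Omega(\theta,\psi)=r^2\cos\psi\,{\rm d}\theta\,{\rm d}\psi$ in path-space coordinates, this is precisely $\lim_{k\to\infty}\int_{S^2}{\rm d}\Omega\; g(\theta)h(\psi)\,{:}\Phi(\theta,\underline\delta_k(.-\psi))^n{:}_{C}$. The remaining task is to identify this limit with $\int_{S^2}{\rm d}\Omega\; g(\theta)h(\psi)\,{:}\mathscr P(\Phi(\theta,\psi)){:}_C$, i.e.\ with the right-hand side of \eqref{e1.11}. This follows from Theorem~\ref{uvtheo}: there the ultraviolet renormalization uses the fully two-dimensional mollifier $\delta_m^{(2)}$, whereas here we have a one-dimensional mollifier $\underline\delta_k$ in the $\psi$-direction only. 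I would argue that the two regularizations converge to the same limit by noting that the test function $g\otimes h$, being continuous (indeed $g$ smooth, $h\in L^2$), allows one to interchange the integration over $\theta$ with the removal of the remaining $\psi$-cutoff, so that $\int_{S^1} r\,{\rm d}\theta\,g(\theta)\,\Phi(\theta,\underline\delta_k(.-\psi))$ already performs the $\theta$-smearing and the resulting object, smeared further in $\psi$ against $r\cos\psi\,h(\psi)$, has the same $L^p$-limit as the doubly-mollified expression. This kind of "one regularization suffices" argument is standard (cf.\ the foliation identity Lemma~\ref{1.0b} for the Gaussian field itself, and \cite[Theorem~1.22]{S}).

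The main obstacle I anticipate is the careful bookkeeping of which covariance ($C_0$ versus $C$) appears in the Wick ordering at each stage, and the justification that the one-dimensional and two-dimensional ultraviolet regularizations yield the same renormalized Wick power when integrated against a test function that is smooth in the $\theta$ (foliation) direction. Concretely, one must show that $\int_{S^1}r\,{\rm d}\theta\,g(\theta)\,{:}\Phi(\theta,\underline\delta_k(.-\psi))^n{:}_C$ and the corresponding object built from $\delta_m^{(2)}$ both converge in $L^p$ to $\int_{S^1}r\,{\rm d}\theta\,g(\theta)\,{:}\Phi(\theta,\psi)^n{:}_C$; this is where one invokes the bounds \eqref{sum-n} and \eqref{first-sum} from the proof of Theorem~\ref{uvtheo}, now with the extra $\theta$-integration handled by the smoothness of $g$ (which provides rapid decay of its Fourier coefficients in the $\theta$-variable, controlling the extra angular-momentum sum). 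Everything else—strong continuity and isometry of ${\rm U}^{(0)}(\theta)$, Fubini, rotation invariance of $C_0$, and the passage from smeared fields to $V_0$—is routine given the results already in place.
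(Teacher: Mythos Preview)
Your proposal is essentially correct and follows the same strategy as the paper, but you run it in the opposite direction (LHS to RHS rather than RHS to LHS) and are less explicit at the crucial technical step. The paper writes the right-hand side as a \emph{double} limit using a product mollifier $\delta^{(2)}_{k,k'}=\delta_k\otimes\delta_{k'}$, then takes the iterated limit: first $k\to\infty$ in the $\theta$-direction (producing the sharp-time fields and converting $C$-ordering to $C_0$-ordering via $\lim_k C(\delta^{(2)}_{k,k'},\delta^{(2)}_{k,k'})=C_0(\delta_{k'},\delta_{k'})$), then $k'\to\infty$ in the $\psi$-direction to reach $\int r\,{\rm d}\theta\,g(\theta)\,{\rm U}^{(0)}(\theta)V_0(\cos_\psi h)$. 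The equality of the double limit with this iterated limit is justified by the elementary metric-space Lemma~\ref{a1}, which the paper states right after the proof.

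What you call the ``main obstacle'' and dismiss as a ``standard `one regularization suffices' argument'' is precisely what Lemma~\ref{a1} handles cleanly: rather than arguing directly that a one-dimensional $\psi$-mollifier yields the same limit as the two-dimensional $\delta^{(2)}_m$, the paper introduces the intermediate product mollifier so that both limits in question become iterated limits of the same double sequence. Your explanation of the passage from $C_0$-ordering to $C$-ordering is also slightly garbled---the point is not that rotation changes the covariance, but that the Wick constants computed from $C_0$ at a fixed $\theta$ coincide with those from $C$ for the sharp-time smeared field, which is exactly what the $k\to\infty$ step in the paper's argument makes explicit.
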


\begin{proof} Consider path-space coordinates and follow an argument given in~\cite{GeJII}: 
let $F \in L^{p}({\mathcal Q}, \Sigma, {\rm d} \Phi_{C})$ for some
$1\leq p<\infty$ and $g \in C^\infty_{{\mathbb R}} (S^{1})$. Then
	\[ 
		\int_{S^{1}} r \, {\rm d} \theta \; g(\theta){\rm U}^{(0)} (\theta)F 
	\]
belongs to $L^{p}({\mathcal Q},\Sigma, {\rm d}\Phi_{C})$.
Together with Lemma~\ref{wickooo} this implies that the functions given in~(\ref{e1.11}) 
are in $L^{p}({\mathcal Q}, \Sigma, {\rm d}\Phi_{C})$. Next prove that they are identical: by linearity, 
one may assume that ${\mathscr P}(\lambda)= \lambda^{n}$.
Use Lemma \ref{wickooo} and the identity~\eqref{wick} to derive
	\[
		\int_{S^2} {\rm d} \Omega \; g(\theta)\,h(\psi)  \, {:} {\mathscr P}(\Phi(\theta,\psi)) {:}_{C} \; 
		=\lim_{(k, k')\to \infty} F(k, k')\hbox{ in }L^{p}({\mathcal Q}, \Sigma, {\rm d}\Phi_{C}) \; ,
	\]
where
	\begin{align*}
		F(k, k') &= 
		\sum_{m=0}^{[n/2]} \frac{\bigl( -\frac{1}{2} C  (\delta^{(2)}_{k, k'}, \delta^{(2)}_{k, k'}) 	
		\bigr)^{m} n!  }{m!(n-2m)!} 
		\\
		& \qquad	\qquad \times	
		\int_{S^2} 	{\rm d} \Omega \,    g(\theta)  h(\psi) \,
		\Phi \bigl(\delta_{k}( \, .\, -\theta)\otimes  \delta_{k'}( \, .\, -\psi) \bigr)^{n-2m} 
	\end{align*}
and $\delta^{(2)}_{k, k'} = \delta_{k} \otimes \delta_{k'} $ provides an
approximation of the Dirac $\delta$-function $\delta^{(2)}$ on~$S^2$. 
According to Proposition \ref{3.9}
	\[
		\lim_{k\to \infty}C \bigl(\delta^{(2)}_{k,k'}, \delta^{(2)}_{k, k'} \bigr)
		= C_{0}(\delta_{k'}, \delta_{k'}) \; . 
	\]
The definition of sharp-time fields in (\ref{e1.4})  implies that 
	\[
 		\lim_{k\to \infty}F(k, k')=  \int_{S^{1}} r \, {\rm d}
		\theta \; g(\theta)V_{k'}(\theta, \cos_\psi h) \: \hbox{ in } \: L^{p}({\mathcal Q}, \Sigma, {\rm d}\Phi_{C}) \; , 
	\]
where
	\begin{align*}
		V_{k'} (\theta, \cos_\psi h) &= \sum_{m=0}^{[n/2]} \frac{n!}{m!(n-2m)!} 
		\Bigl(-\frac{1}{2} C_{0}(\delta_{k'}, \delta_{k'}) \Bigr)^{m} \nonumber \\
		& \qquad \qquad \times 
		\int_{- \pi/2}^{\pi /2} r \cos \psi \,  {\rm d} \psi \;  h(\psi)  \Phi \bigl(\theta, \delta_{k'}( \, .\, -\psi) \bigr)^{n - 2m}\; .    
	\end{align*}
Note that  $V_{k'}(\theta, \cos_\psi h)={\rm U}^{(0)}  (\theta)V_{k'}(0, \cos_\psi h)$.
By Lemma \ref{wickooo}    
	\[
		\lim_{k'\to \infty}V_{k'}(0, \cos_\psi h)
		= \int_{- \pi/2}^{\pi /2}  r \cos \psi \, {\rm d} \psi  \;  h(\psi)\, {:}{\mathscr P}(\Phi(0, \psi)){:}_{C_{0}}    
	\]
in $L^{p}({\mathcal Q}, \Sigma, {\rm d} \Phi_{C})$ and hence 
	\[
	\lim_{k' \to \infty}\int_{S^{1}} r {\rm d} \theta \, g(\theta)V_{k'}(\theta, \cos_\psi h)
	= \int_{S^{1}} r {\rm d} \theta \; g(\theta)({\rm U}^{(0)} (\theta) {V_0}) ( \cos _\psi  h)  
	\]
in $L^{p}({\mathcal Q}, \Sigma,  {\rm d} \Phi_{C}) $. Apply  Lemma \ref{a1} below 
with $E= L^{p}({\mathcal Q}, \Sigma, {\rm d} \Phi_{C})$ to obtain the identity in (\ref{e1.11}).  
\end{proof}

For completeness we recall a simple technical lemma from \cite{GeJII}.
\begin{lemma}
\label{a1}
Let $F \colon {\mathbb R}^{2} \to E$ be a map with values in
a metric space $E$. 
\begin{itemize}
\item[$ i.)$] Assume that
\begin{align*}
	\lim_{k,k'\to \infty}& F(k, k')= F_{\infty} \text{ exists} \, ,\\
	\lim_{k'\to \infty}\; & F(k, k')= G(k) \text{ exists }\forall k\in {\mathbb N} \, ,\\
	\lim_{k\to \infty}\; & G(k)=G_{\infty} \text{ exists}\, .
\end{align*}
Then $F_{\infty}= G_{\infty}$.
\item[$ ii.)$] Assume that
\begin{align*}
\lim_{k'\to \infty} & F(k, k')= G(k)\hbox{ exists} \, ,\\
\lim_{k\to \infty} & G(k)= G_{\infty}\hbox{ exists} \, , \\
\lim_{k\to \infty} & F(k, k')= H(k')\hbox{ exists and the convergence} \\
& \qquad \qquad \qquad \quad \; \text{is uniform w.r.t. }k' \, .
\end{align*}
Then $\lim_{k'\to \infty}H(k')=G_{\infty}$.
\end{itemize}
\end{lemma}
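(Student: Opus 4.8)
The plan is to prove both parts by elementary $\varepsilon$-arguments in the metric space $E$, using only the triangle inequality and the continuity of the distance function; no extra structure is needed. I would begin by fixing notation — write $d$ for the metric on $E$ — and, importantly, spelling out the meaning of the hypotheses. In $i.)$ the first line is to be read as a genuine joint limit: for every $\varepsilon>0$ there is $N$ such that $d(F(k,k'),F_\infty)<\varepsilon$ whenever $k,k'\ge N$ (this is essential; with merely iterated limits the conclusion can fail). In $ii.)$ the last line is to be read as: for every $\varepsilon>0$ there is $N$ such that $d(F(k,k'),H(k'))<\varepsilon$ for all $k\ge N$ and \emph{all} $k'$ simultaneously.

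For part $i.)$, the key move is to pass to the limit $k'\to\infty$ inside the joint-limit estimate. Given $\varepsilon>0$, choose $N$ as above. For each fixed $k\ge N$, the sequence $k'\mapsto F(k,k')$ converges to $G(k)$, so applying the continuous function $x\mapsto d(x,F_\infty)$ to the inequality $d(F(k,k'),F_\infty)<\varepsilon$ and letting $k'\to\infty$ yields $d(G(k),F_\infty)\le\varepsilon$ for every $k\ge N$. Now let $k\to\infty$ and use $G(k)\to G_\infty$ to conclude $d(G_\infty,F_\infty)\le\varepsilon$. Since $\varepsilon$ was arbitrary, $F_\infty=G_\infty$.

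For part $ii.)$, I would estimate $d(H(k'),G_\infty)$ for large $k'$ by inserting two intermediate points, $F(k,k')$ and $G(k)$, for a suitably chosen and then fixed $k$:
\[
 d(H(k'),G_\infty)\le d(H(k'),F(k,k'))+d(F(k,k'),G(k))+d(G(k),G_\infty)\,.
\]
Given $\varepsilon>0$, uniform convergence lets me pick $k_0$ so that the first term is $<\varepsilon$ for all $k'$ at once; enlarging $k_0$ if necessary (using $G(k)\to G_\infty$) also makes the third term $<\varepsilon$. With $k=k_0$ now fixed, $F(k_0,k')\to G(k_0)$ as $k'\to\infty$, so the middle term is $<\varepsilon$ for all $k'$ past some $N'$. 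Hence $d(H(k'),G_\infty)<3\varepsilon$ for $k'\ge N'$, which gives $\lim_{k'\to\infty}H(k')=G_\infty$.

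There is no genuine obstacle: the lemma is a routine fact about iterated versus joint limits. The only place where a little care is needed — and the only place where the hypotheses are actually used in an essential way — is the correct reading of "joint limit" in $i.)$ and of "uniform in $k'$" in $ii.)$; I would state both interpretations explicitly at the start so that the three-term splittings are unambiguous.
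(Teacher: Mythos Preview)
Your proof is correct and is exactly the standard $\varepsilon$--triangle-inequality argument one expects for this elementary fact about iterated versus joint/uniform limits. The paper itself does not give a proof of this lemma; it merely recalls it from \cite{GeJII} (``For completeness we recall a simple technical lemma from \cite{GeJII}''), so there is nothing in the paper to compare your argument against.
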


\part{The Osterwalder-Schrader Reconstruction}

\chapter{The Reconstruction of Free Quantum Fields}

The  {\em Markov property} implies that the time-zero quantum fields acting on the vacuum vector 
generate the physical Hilbert space. In case it holds, Nelson's reconstruction theorem 
(see \cite{N1}\cite{N2}\cite{N3}\cite{N4}) applies, and the more sophisticated 
reconstruction theorem of Osterwalder and Schrader \cite{OS1}\cite{OS2} is not necessary. 

\section{Reflection positivity}
\label{sec:4.3n}

The time reflection diffeomorphism $T$ (see \eqref{deftimerefl})
induces a map $T_*$ on $C^\infty (S^2)$:
\label{hospage}
	\[
 		T_* h = h \circ T^{-1} , \qquad h \in C^\infty (S^2) \; . 
	\]
$T_*$ extends to a unitary operator on $H_{{\mathbb C}}^{-1} (S^2)$.
Use (\ref{dfock}) to define a {\em unitary} operator~$\Theta$ on $L^2({\mathcal Q}, \Sigma, {\rm d} \Phi_C)$
corresponding to $\Gamma (T_*)$.  For $f_1, \ldots ,  f_n \in H^{-1}(S^2)$,
	\[
		\Theta \left( \Phi (f_1) \cdots \Phi (f_n) \right) 
		= \Phi (T_* f_1) \cdots \Phi (T_* f_n) \; . 
	\]
$\Theta$ induces a measure preserving automorphism\footnote{Consider the 
Hilbert space $L^2({\mathcal Q}, \Sigma, {\rm d} \Phi_C)$ and let ${\mathcal M}$ be the von Neumann algebra 
of all multiplications by bounded measurable functions, acting on $L^2({\mathcal Q}, \Sigma, {\rm d} \Phi_C)$. 
The Lebesgue measure lifts to a countably additive probability measure $\mu$ on the projections of ${\mathcal M}$. 
A $*$-automorphism $\alpha$ of a von Neumann algebra ${\mathcal M}$ is said to preserve the measure $\mu$  
if $\mu \circ \alpha = \mu$	on the projections of ${\mathcal M}$. See, \emph{e.g.},~\cite{Arv}.} 
of $L^{\infty}({\mathcal Q}, \Sigma, {\rm d} \Phi_C)$, which extends to an isometry of 
$L^{p}({\mathcal Q}, \Sigma, {\rm d} \Phi_C)$ for $1\leq p<\infty$.  

\begin{theorem}[Dimock \cite{D}, Theorem 2, p.~248]
\label{reftheo}
Let $F \in L^2 ({\mathcal Q}, \Sigma_{\overline { S_+}}, {\rm d} \Phi_C)$. 
Then  
\begin{itemize}
\item[$ i.)$] $\Theta (F) \in L^2 ({\mathcal Q}, \Sigma_{\overline { S_-}}, {\rm d} \Phi_C)\, $; 
\item[$ ii.)$] $\int  {\rm d} \Phi_C \; \overline{\Theta (F)} F  \ge 0 $ {\rm (reflection positivity)}; 
\item[$ iii.)$] $\Theta {\mathcal E}_{\Sigma_{S^1}}= {\mathcal E}_{\Sigma_{S^1}}\Theta$.  
\end{itemize}
\end{theorem}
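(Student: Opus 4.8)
The three assertions are standard consequences of functoriality of second quantization combined with Dimock's pre-Markov analysis of the Sobolev spaces on $S^2$, so the plan is to reduce everything to statements about the one-particle operator $T_* $ on $\mathbb{H}^{-1}_{\mathbb C}(S^2)$ and the projections $e(S^1)$, $e(\overline{S_\pm})$, then transport these along the unitary $\Gamma$-functor and the identification \eqref{dfock} of $L^2({\mathcal Q},\Sigma,{\rm d}\Phi_C)$ with $\Gamma(\mathbb{H}^{-1}_{\mathbb C}(S^2))$. First I would record the one-particle facts: $T_*$ is a unitary involution (indeed $T^2=\mathrm{id}$ on $S^2$, hence $T_*^2=\mathbb{1}$), it maps $\mathbb{H}^{-1}_{\upharpoonright \overline{S_+}}(S^2)$ onto $\mathbb{H}^{-1}_{\upharpoonright \overline{S_-}}(S^2)$ (because $T$ maps $\overline{S_+}$ diffeomorphically onto $\overline{S_-}$ and isometries of $(S^2,g)$ preserve the Sobolev norms, so $\mathrm{supp}\, T_*f = T(\mathrm{supp}\,f)$), and it commutes with $e(S^1)$ since $T$ fixes $S^1$ pointwise, giving $T_* \mathbb{H}^{-1}_{\upharpoonright S^1}(S^2) = \mathbb{H}^{-1}_{\upharpoonright S^1}(S^2)$; hence $T_* e(S^1) T_*^{-1} = e(S^1)$, which upon applying $\Gamma$ yields $(iii)$, i.e.\ $\Theta {\mathcal E}_{\Sigma_{S^1}} = {\mathcal E}_{\Sigma_{S^1}} \Theta$.

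For $(i)$, I would argue at the level of generating elements: by \eqref{dfock} the space $L^2({\mathcal Q},\Sigma_{\overline{S_+}},{\rm d}\Phi_C)$ is the closed span of the coherent-type vectors ${\rm e}^{i\Phi_E(f)}\Omega_E$ with $f\in \mathbb{H}^{-1}_{\upharpoonright \overline{S_+}}(S^2)$ (this is precisely how $\Sigma_{\overline{S_+}}$ is identified, via $E_{\Sigma_{\overline{S_+}}}=\Gamma(e(\overline{S_+}))$ in \eqref{pfock} and Theorem~\ref{martheo}). Since $\Theta = \Gamma(T_*)$, one computes $\Theta\, {\rm e}^{i\Phi_E(f)}\Omega_E = {\rm e}^{i\Phi_E(T_*f)}\Omega_E$ (functoriality, $\Gamma(T_*)\Omega_E = \Omega_E$), and $T_*f \in \mathbb{H}^{-1}_{\upharpoonright \overline{S_-}}(S^2)$ by the support statement above; taking closed linear spans and using continuity of $\Theta$ on $L^2$ gives $\Theta F \in L^2({\mathcal Q},\Sigma_{\overline{S_-}},{\rm d}\Phi_C)$. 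Equivalently, one can phrase this as $\Theta\, \Gamma(e(\overline{S_+})) = \Gamma(e(\overline{S_-}))\,\Theta$, which follows from $T_* e(\overline{S_+}) T_*^{-1} = e(\overline{S_-})$ (the projections onto the mutually $T_*$-conjugate subspaces).

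The substantive point is $(ii)$, reflection positivity, and this is where Dimock's pre-Markov property (Lemma~\ref{dlemma}, $e(\overline{S_-})e(\overline{S_+}) = e(S^1)$) enters. The key observation is that, for $F\in L^2({\mathcal Q},\Sigma_{\overline{S_+}},{\rm d}\Phi_C)$,
\[
\int {\rm d}\Phi_C\; \overline{\Theta(F)}\,F = \langle \Theta F, F\rangle = \langle \Gamma(T_*)F, F\rangle = \langle \Gamma(T_*)\Gamma(e(\overline{S_+}))F,\, \Gamma(e(\overline{S_+}))F\rangle ,
\]
using $F = \Gamma(e(\overline{S_+}))F$; inserting $\Gamma(e(\overline{S_+}))$ again on the left and using $\Gamma(e(\overline{S_+}))\Gamma(T_*)\Gamma(e(\overline{S_+})) = \Gamma\big(e(\overline{S_+})T_* e(\overline{S_+})\big)$ together with the identity $e(\overline{S_+}) T_* e(\overline{S_+}) = e(\overline{S_+}) e(\overline{S_-}) e(\overline{S_+}) = \big(e(\overline{S_-})e(\overline{S_+})\big)^*\big(e(\overline{S_-})e(\overline{S_+})\big) = e(S^1)^* e(S^1)\ge 0$ — where the middle step uses $T_* e(\overline{S_+}) = e(\overline{S_-}) T_*$ and $T_*$-invariance — reduces the quadratic form to $\langle \Gamma(A^*A) F, F\rangle$ with $A = e(\overline{S_-})e(\overline{S_+})$; since $\Gamma$ maps a positive contraction to a positive contraction and $A^*A = e(S^1) \ge 0$, positivity follows. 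I expect the main obstacle to be the bookkeeping in this operator identity: one must be careful that $T_*$ is the \emph{Bogoliubov} unitary implementing the reflection (not just a projection conjugation) and that the cross-term algebra $\Gamma(e(\overline{S_+})) \Theta \Gamma(e(\overline{S_+})) = \Gamma(e(\overline{S_+}) T_* e(\overline{S_+}))$ is exactly the functorial image of a positive operator, which is where Lemma~\ref{dlemma} is indispensable; once that single operator identity $e(\overline{S_+}) T_* e(\overline{S_+}) = e(S^1)$ is established on $\mathbb{H}^{-1}_{\mathbb C}(S^2)$, the rest is routine functoriality. Finally, $(iii)$ has already been obtained above. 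I would close by remarking that $(ii)$ makes $\int {\rm d}\Phi_C\,\overline{\Theta(\cdot)}(\cdot)$ a positive semidefinite sesquilinear form on $L^2({\mathcal Q},\Sigma_{\overline{S_+}},{\rm d}\Phi_C)$, which is exactly the datum feeding the Osterwalder–Schrader/Nelson reconstruction in the next chapter.
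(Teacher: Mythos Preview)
Your proof is correct and follows the same route as the paper: parts (i) and (iii) are immediate from $T_* e(\overline{S_\pm}) T_*^{-1} = e(\overline{S_\mp})$, $T_* e(S^1) = e(S^1)T_*$, and $\Gamma$-functoriality, while (ii) reduces via the Markov property (Theorem~\ref{martheo}~iv), equivalently Lemma~\ref{dlemma}) to the positivity of $\Gamma(e(S^1))$---the paper's proof is a two-line sketch citing Dimock and Klein--Landau, and you have correctly filled in the details. One small point of bookkeeping: the first equality in your chain $e(\overline{S_+}) T_* e(\overline{S_+}) = e(\overline{S_+}) e(\overline{S_-}) e(\overline{S_+})$ does not follow from $T_* e(\overline{S_+}) = e(\overline{S_-}) T_*$ alone (that only gives $e(\overline{S_+}) e(\overline{S_-}) T_* = e(S^1)T_*$); you also need that $T_*$ acts as the identity on $\mathbb{H}^{-1}_{\upharpoonright S^1}(S^2)$, which holds since $T$ fixes $S^1$ pointwise and is presumably what you mean by ``$T_*$-invariance.''
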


\begin{proof}
Properties $ i.)$ and $ iii.)$ follow directly from the definitions;  $ ii.)$ is a direct consequence (see \cite{D}\cite{KL1}) 
of the Markov property  (Theorem~\ref{martheo} $ iv.)$).
\end{proof}

We note that an alternative proof of reflection positivity for Riemannian manifolds with a suitable symmetry 
(the sphere being in the class considered) was given in \cite{GRS}.

\section{The reconstruction of the Hilbert space}
\label{sec:4.4}
Define ${\mathcal N}\subset L^2 ({\mathcal Q}, \Sigma_{\overline { S_+}},  {\rm d} \Phi_C)$ as the kernel of
the positive quadratic form
	\begin{equation}
		\label{ossp}
		\langle F , G \rangle_{\rm os} =\int_{{\mathcal Q}} {\rm d} \Phi_C \;  \overline{ \Theta (F) }G , \qquad F,G \in L^{2}({\mathcal Q}, \Sigma_{\overline{ S^+}}, {\rm d} \Phi_C) \; .
	\end{equation} 
In other words, 
	\[
	 {\mathcal N}= \left\{ F \in L^2 ({\mathcal Q}, \Sigma_{\overline { S_+}},  {\rm d} \Phi_C) \mid \langle G, F \rangle_{\rm os} = 0 
	 \quad \forall G \in L^2 ({\mathcal Q}, \Sigma_{\overline { S_+}},  {\rm d} \Phi_C)
	\right\} \; . 
	\]
Complete the quotient space with respect to the positive definite scalar
product~\eqref{ossp}: 
		\label{hos-page}
	\begin{equation}
					\label{hos}
		{\mathcal H}=\hbox{ completion of } L^{2}({\mathcal Q}, \Sigma_{\overline { S_+}}, 
		{\rm d} \Phi_C) /{\mathcal N} \; .
	\end{equation}
Let ${\mathcal V}$ be the canonical projection 
	\[
		{\mathcal V}\colon L^{2}({\mathcal Q}, \Sigma_{\overline { S_+}}, {\rm d} \Phi_C) 
		\to L^{2}({\mathcal Q}, \Sigma_{\overline { S_+}}, {\rm d} \Phi_C)/{\mathcal N} \; .  
	\]
There is a {\em distinguished unit vector}
\label{vacumvecpage}
	\begin{equation}
		\label{dv} 
		\Omega ={\mathcal V}(1) \in {\mathcal H} \; . 
	\end{equation}
$1\in L^2 ({\mathcal Q}, \Sigma_{\overline { S_+}}, {\rm d} \Phi_C)$ is the constant 
function equal to $1$ on ${\mathcal Q}$.

If $A\in L^{\infty}({\mathcal Q}, \Sigma_{S^1}, {\rm d} \Phi_C)$, 
multiplication by $A$ preserves ${\mathcal N}$, since 
$A$ is by assumption $\Sigma_{S^1}$ measurable. Define a  bounded operator
$A^{\rm os}\in {\mathcal B}({\mathcal H})$ by
	\begin{equation}
		A^{\rm os}{\mathcal V} (F) = {\mathcal V}(AF ) \; , 
		\qquad F \in L^2 ({\mathcal Q}, \Sigma_{\overline { S_+}}, {\rm d} \Phi_C) \; , 
		\label{identi}	
	\end{equation}
and denote by ${\mathcal U} (S^1) \subset {\mathcal B}({\mathcal H})$ the 
abelian von Neumann algebra 
\label{abelianalgebraospage}
	\[
		{\mathcal U} (S^1) = \left\{A^{\rm os} \in {\mathcal B} ({\mathcal H} ) 
		\mid A\in L^{\infty}({\mathcal Q}, \Sigma_{S^1} ,
		{\rm d} \Phi_C) \right\} \; . 
	\]

\begin{lemma}[Klein \& Landau \cite{KL1}, Lemma 8.1]
\label{uiso}
The map $A\mapsto A^{\rm os}$ induces a 
weakly continuous \hbox{$^{*}$-isomorphism} between $L^{\infty}({\mathcal Q}, \Sigma_{S^1},
{\rm d} \Phi_C)$ and ${\mathcal U} (S^1)$. 
\end{lemma}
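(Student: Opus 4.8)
The statement asserts that the map $A \mapsto A^{\rm os}$ (defined in \eqref{identi}) from $L^\infty({\mathcal Q}, \Sigma_{S^1}, {\rm d}\Phi_C)$ into ${\mathcal B}({\mathcal H})$ is an injective, weakly continuous $*$-homomorphism onto ${\mathcal U}(S^1)$. The plan is to verify each of these properties in turn, and the crucial input will be the Markov property of Theorem~\ref{martheo}~$iv.)$ together with reflection positivity (Theorem~\ref{reftheo}).

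First I would check that $A \mapsto A^{\rm os}$ is well-defined and a homomorphism. Since $A$ is $\Sigma_{S^1}$-measurable and bounded, and $\Sigma_{S^1} \subset \Sigma_{\overline{S_+}}$, multiplication by $A$ maps $L^2({\mathcal Q}, \Sigma_{\overline{S_+}}, {\rm d}\Phi_C)$ into itself. To see that it preserves the null space ${\mathcal N}$, one uses that $\Theta$ commutes with ${\mathcal E}_{\Sigma_{S^1}}$ (Theorem~\ref{reftheo}~$iii.)$), that $\Theta$ fixes $\Sigma_{S^1}$-measurable functions (it acts geometrically and $T$ leaves $S^1$ invariant), and the factorisation $\langle \Theta(F),AG\rangle = \langle \Theta(F), {\mathcal E}_{\Sigma_{S^1}}(AG)\rangle$ following from the Markov property --- one shows $\langle G', AF\rangle_{\rm os}$ depends only on ${\mathcal V}(F)$, hence $A^{\rm os}$ is well-defined on the quotient, and $\|A^{\rm os}\| \le \|A\|_\infty$ follows from the same estimate applied with $F=G$. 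Multiplicativity $(AB)^{\rm os}=A^{\rm os}B^{\rm os}$ is immediate from $(AB)F = A(BF)$, and the $*$-property $\langle A^{\rm os}{\mathcal V}(F), {\mathcal V}(G)\rangle = \langle {\mathcal V}(F), \overline{A}^{\rm os}{\mathcal V}(G)\rangle$ follows from $\overline{\Theta(AF)}G = \overline{\Theta(F)}\,\Theta(\overline{A})\cdot$ wait --- more carefully, since $A$ is $\Sigma_{S^1}$-measurable, $\Theta(\overline{A}\,\Theta^{-1}(\cdot))$ reduces to multiplication by $\overline A$, so the computation goes through.

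Next, injectivity. Suppose $A^{\rm os}=0$, i.e.\ ${\mathcal V}(AF) = 0$ for all $F \in L^2({\mathcal Q},\Sigma_{\overline{S_+}},{\rm d}\Phi_C)$. Taking $F=1$ gives $\langle {\mathcal V}(AG), {\mathcal V}(A)\rangle = 0$; more usefully, $0 = \langle A^{\rm os}{\mathcal V}(1), A^{\rm os}{\mathcal V}(1)\rangle = \int {\rm d}\Phi_C\;\overline{\Theta(A)}\,|A|^2\cdots$ --- again using $\Theta$-invariance of $A$ and the Markov property, this reduces to $\int |A|^2 |{\mathcal E}_{\Sigma_{S^1}}(\cdot)|\cdots$, and ultimately to $\|A\|_{L^2(\Sigma_{S^1})}=0$ since the relevant conditional-expectation kernel is faithful on $\Sigma_{S^1}$-measurable functions; hence $A=0$ a.e. Finally, weak continuity: if $A_\alpha \to A$ $\sigma$-weakly in $L^\infty$, then for fixed $F,G \in L^2(\Sigma_{\overline{S_+}})$ one has $\langle {\mathcal V}(G), A_\alpha^{\rm os}{\mathcal V}(F)\rangle = \int {\rm d}\Phi_C\; \overline{\Theta(G)}\, A_\alpha F = \int {\rm d}\Phi_C\; A_\alpha \cdot (\overline{\Theta(G)}F)$, and since $\overline{\Theta(G)}F \in L^1({\mathcal Q},\Sigma,{\rm d}\Phi_C)$ and --- after conditioning onto $\Sigma_{S^1}$ --- lies in $L^1(\Sigma_{S^1})$, this converges; weak continuity in the other direction is automatic since $\|A^{\rm os}\|\le\|A\|_\infty$. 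Surjectivity onto ${\mathcal U}(S^1)$ holds by the very definition of ${\mathcal U}(S^1)$ as the image.

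\textbf{Main obstacle.} The delicate point is the repeated use of the Markov property to ``push'' the multiplication operator through the time reflection: one must argue cleanly that for $\Sigma_{S^1}$-measurable $A$ and $F,G$ supported on $\overline{S_+}$, the bilinear form $\langle G, AF\rangle_{\rm os}$ equals $\langle {\mathcal E}_{\Sigma_{S^1}}G, A\,{\mathcal E}_{\Sigma_{S^1}}F\rangle_{\text{os-type}}$ --- this is exactly where Theorem~\ref{martheo}~$iv.)$ and Theorem~\ref{reftheo}~$iii.)$ enter, and getting the faithfulness needed for injectivity requires knowing that ${\mathcal E}_{\Sigma_{S^1}}$ restricted to $\Sigma_{S^1}$-measurable functions is the identity (so the induced sesquilinear form on ${\mathcal U}(S^1)$ is a genuine scalar product). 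I expect this bookkeeping --- rather than any deep analytic difficulty --- to be the part requiring the most care; it is essentially the argument of Klein and Landau \cite{KL1}, Lemma 8.1, transported to the sphere.
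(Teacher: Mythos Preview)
The paper does not give its own proof of this lemma; it simply cites Klein and Landau \cite{KL1}, Lemma~8.1. So there is no paper-proof to compare against, and your proposal is in effect a reconstruction of the Klein--Landau argument.

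Your approach is sound and identifies the right ingredients, but it can be streamlined considerably. The single key observation is that any $A\in L^\infty({\mathcal Q},\Sigma_{S^1},{\rm d}\Phi_C)$ satisfies $\Theta(A)=A$, since the time reflection~$T$ fixes~$S^1$ pointwise. From this, well-definedness is a one-liner: if $F\in{\mathcal N}$ then for every $G$ one has $\langle G, AF\rangle_{\rm os}=\int\overline{\Theta(\overline{A}G)}\,F\,{\rm d}\Phi_C=\langle \overline{A}G, F\rangle_{\rm os}=0$, since $\overline{A}G\in L^2(\Sigma_{\overline{S_+}})$. The $*$-property follows by the same manipulation. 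Injectivity is even simpler than you make it: taking $F=1$ gives $\|{\mathcal V}(A)\|^2=\int\overline{\Theta(A)}A\,{\rm d}\Phi_C=\int|A|^2\,{\rm d}\Phi_C$, so $A^{\rm os}=0$ forces $A=0$ a.e.\ immediately --- no conditioning or faithfulness argument is needed here.

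The Markov property is really only needed for boundedness $\|A^{\rm os}\|\le\|A\|_\infty$ (and, dually, to see the map is isometric for the operator norm). The cleanest way to organize this is to observe that Theorem~\ref{martheo}~$iv.)$ together with $\Theta$-invariance on $\Sigma_{S^1}$ gives $\langle F,G\rangle_{\rm os}=\langle {\mathcal E}_{\Sigma_{S^1}}F,{\mathcal E}_{\Sigma_{S^1}}G\rangle_{L^2}$, so ${\mathcal V}$ factors through ${\mathcal E}_{\Sigma_{S^1}}$ and is isometric on $L^2(\Sigma_{S^1})$. Under this identification ${\mathcal H}\cong L^2({\mathcal Q},\Sigma_{S^1},{\rm d}\Phi_C)$ (which is exactly what Lemma~\ref{ostheo} will exploit), the operator $A^{\rm os}$ is simply multiplication by $A$, and the whole statement reduces to the standard fact that $L^\infty(\mu)\hookrightarrow{\mathcal B}(L^2(\mu))$ is a weakly continuous $*$-isomorphism onto its image. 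Your ``main obstacle'' is thus correctly located, but once you write down this identification the bookkeeping dissolves.
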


\begin{lemma}[Klein \& Landau \cite{KL1}, Theorem 11.2; see also \cite{GeJ}]
\label{ostheo} 
	\[
		\overline{{\mathcal U} (S^1) \Omega} = {\mathcal H} \; .
	\] 
In other words, application of bounded functions of time-zero fields to the vector~$\Omega$ 
generates a  dense set in ${\mathcal H}$.
\end{lemma}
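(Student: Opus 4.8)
The plan is to show that the canonical projection ${\mathcal V}$, restricted to $L^2({\mathcal Q}, \Sigma_{\overline{S_+}}, {\rm d}\Phi_C)$, factors through the conditional expectation ${\mathcal E}_{\Sigma_{S^1}}$ onto the time-zero $\sigma$-algebra; once this is in place, the statement reduces to the elementary density of bounded functions of time-zero fields in $L^2({\mathcal Q}, \Sigma_{S^1}, {\rm d}\Phi_C)$. The central computation is the identity
\[
 \langle F, G\rangle_{\rm os} = \big\langle {\mathcal E}_{\Sigma_{S^1}}F \, , \, {\mathcal E}_{\Sigma_{S^1}}G \big\rangle_{L^2({\mathcal Q}, \Sigma_{S^1}, {\rm d}\Phi_C)} \, , \qquad F, G \in L^2({\mathcal Q}, \Sigma_{\overline{S_+}}, {\rm d}\Phi_C) \, .
\]
To prove it I would start from $\langle F, G\rangle_{\rm os} = \int_{\mathcal Q} {\rm d}\Phi_C \; \overline{\Theta(F)}\, G$, which is well defined since $\Theta$ is an $L^2$-isometry, so that $\overline{\Theta(F)}\,G \in L^1$. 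By Theorem \ref{reftheo} $i.)$ the function $\Theta(F)$ is measurable with respect to the $\sigma$-algebra of the lower hemisphere, and since conditional expectation is orthogonal projection in $L^2$, I may replace $G$ by its conditional expectation onto that $\sigma$-algebra without changing the integral; the Markov property (Theorem \ref{martheo}) identifies this with ${\mathcal E}_{\Sigma_{S^1}}G$, because $G$ is measurable with respect to the upper hemisphere. The integrand now carries the $\Sigma_{S^1}$-measurable factor ${\mathcal E}_{\Sigma_{S^1}}G$, so the same reasoning lets me replace $\overline{\Theta(F)}$ by $\overline{{\mathcal E}_{\Sigma_{S^1}}\Theta(F)}$; using $\Theta\, {\mathcal E}_{\Sigma_{S^1}} = {\mathcal E}_{\Sigma_{S^1}}\Theta$ (Theorem \ref{reftheo} $iii.)$) together with the fact that $\Theta$ acts as the identity on $L^2({\mathcal Q}, \Sigma_{S^1}, {\rm d}\Phi_C)$ — because the time reflection \eqref{deftimerefl} fixes the equator $S^1$ pointwise, so $T_*$ is the identity on the time-zero subspace $\mathbb{H}^{-1}_{\upharpoonright S^1}(S^2)$ of $\mathbb{H}^{-1}(S^2)$ — this yields the displayed identity.

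Granting this, taking $F = G$ shows $\|{\mathcal V}(F)\|_{\mathcal H}^2 = \|{\mathcal E}_{\Sigma_{S^1}}F\|_{L^2}^2 \le \|F\|_{L^2}^2$, so ${\mathcal V}$ is a contraction from $L^2({\mathcal Q}, \Sigma_{\overline{S_+}}, {\rm d}\Phi_C)$ into ${\mathcal H}$; moreover $\big\langle F - {\mathcal E}_{\Sigma_{S^1}}F, G\big\rangle_{\rm os} = 0$ for all $G$ (by the displayed identity and ${\mathcal E}_{\Sigma_{S^1}}^2 = {\mathcal E}_{\Sigma_{S^1}}$), so $F - {\mathcal E}_{\Sigma_{S^1}}F \in {\mathcal N}$, i.e. ${\mathcal V}(F) = {\mathcal V}({\mathcal E}_{\Sigma_{S^1}}F)$. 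Since ${\mathcal E}_{\Sigma_{S^1}}F \in L^2({\mathcal Q}, \Sigma_{S^1}, {\rm d}\Phi_C) \subset L^2({\mathcal Q}, \Sigma_{\overline{S_+}}, {\rm d}\Phi_C)$, this gives ${\mathcal V}\big(L^2({\mathcal Q}, \Sigma_{\overline{S_+}}, {\rm d}\Phi_C)\big) = {\mathcal V}\big(L^2({\mathcal Q}, \Sigma_{S^1}, {\rm d}\Phi_C)\big)$, and the left-hand side is dense in ${\mathcal H}$ by the construction \eqref{hos}. Finally, $L^\infty({\mathcal Q}, \Sigma_{S^1}, {\rm d}\Phi_C)$ is dense in $L^2({\mathcal Q}, \Sigma_{S^1}, {\rm d}\Phi_C)$ by truncation and ${\mathcal V}$ is continuous, so ${\mathcal V}\big(L^\infty({\mathcal Q}, \Sigma_{S^1}, {\rm d}\Phi_C)\big)$ is dense in ${\mathcal H}$; by \eqref{identi} with $F = 1$ and $\Omega = {\mathcal V}(1)$ from \eqref{dv} one has ${\mathcal V}(A) = A^{\rm os}{\mathcal V}(1) = A^{\rm os}\Omega$ for $A \in L^\infty({\mathcal Q}, \Sigma_{S^1}, {\rm d}\Phi_C)$, whence ${\mathcal V}\big(L^\infty({\mathcal Q}, \Sigma_{S^1}, {\rm d}\Phi_C)\big) = {\mathcal U}(S^1)\Omega$, and $\overline{{\mathcal U}(S^1)\Omega} = {\mathcal H}$ follows.

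The main obstacle is the first paragraph: the manipulation must keep precise track of which $\sigma$-algebra each factor is measurable with respect to, and one must be sure that all conditional expectations are moved inside an $L^1$-pairing of $L^2$-functions (which is exactly why $\Theta$ being an isometry, not just a bounded map, is used). Everything else is the Markov property and the $\Theta$-compatibility of ${\mathcal E}_{\Sigma_{S^1}}$ already recorded in Theorems \ref{martheo} and \ref{reftheo}; the only genuinely new geometric input is the triviality of the time reflection on the time-zero fields, which is immediate since $T$ fixes $S^1$ pointwise. Reflection positivity itself is not needed beyond the fact — built into the definition \eqref{hos} of ${\mathcal H}$ — that $\langle\cdot,\cdot\rangle_{\rm os}$ is positive semi-definite.
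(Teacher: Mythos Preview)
Your proof is correct but follows a different route from the paper's. The paper identifies ${\mathcal H}$ explicitly with the Fock space $\Gamma(\widehat{\mathfrak h}(S^1))$: using the decomposition of $\mathbb{H}^{-1}_{\mathbb C}(S^2)$ in Lemma~\ref{detheo}, it restricts the map ${\rm e}^{i\Phi(f)} \mapsto {\rm e}^{i\Phi_E(f)}\Omega_E$ to time-zero fields, thereby realising ${\mathcal H}$ as $\Gamma(\widehat{\mathfrak h}(S^1))$ with $\Omega$ as the Fock vacuum and ${\mathcal U}(S^1)$ as the von Neumann algebra generated by the real-argument field operators $\Phi_F(h)$; the cyclicity of $\Omega$ is then a standard Fock space fact.

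Your argument is more direct and more general: the identity $\langle F,G\rangle_{\rm os} = \langle {\mathcal E}_{\Sigma_{S^1}}F, {\mathcal E}_{\Sigma_{S^1}}G\rangle_{L^2}$, which you derive cleanly from the Markov property (Theorem~\ref{martheo}~$iii.)$) and the triviality of $\Theta$ on $\Sigma_{S^1}$-measurable functions, shows that ${\mathcal V}$ factors through ${\mathcal E}_{\Sigma_{S^1}}$ without invoking the Gaussian structure at all. This is essentially the abstract Klein--Landau argument the statement cites, and it would work for any OS-positive Markov path space. The paper's Fock identification buys more --- it is used repeatedly downstream (e.g.\ in Propositions~\ref{propfreeboost} and~\ref{propfreeboost2}) --- but for the bare cyclicity statement your route is the cleaner one.
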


\begin{proof}
Let $\Phi_F$ be the  Fock space field operator  and $\Omega_F $   the   vacuum 
vector in $\Gamma( \widehat{\mathfrak h} (S^1))$. 
The decomposition of $H_{{\mathbb C}}^{-1} (S^2)$ provided by Lemma \ref{detheo} allows us 
to restrict the map \eqref{dfock} to the Euclidean time-zero fields, 
	\[
			{\rm e}^{i \Phi (0, h ) } \mapsto {\rm e}^{i \Phi_F (h)} \Omega_F , 
			\quad h \in \widehat{\mathfrak h}(S^1) \; ,
	\]
thereby demonstrating that ${\mathcal H} \cong \Gamma(\widehat{\mathfrak h} (S^1))$
and that $ {\mathcal U} (S^1)$ can be identified with the von Neumann algebra 
generated by the Fock space field operators~$\Phi_F (h)$, with 
$h \in \widehat{\mathfrak h} (S^1, \mathbb{R})$, defined in terms of creation and annihilation 
operators (see, \emph{e.g.},~\cite{BR}). This establishes the result. 
\end{proof}

The conditional projection~\eqref{identi} identifies the 
time-zero fields $\Phi (0, h)\in L^{2}({\mathcal Q}, \Sigma_{S^1}, {\rm d} \Phi_{C})$ 
with unbounded operators $\Phi^{\rm os} (0,h)$ affiliated to the abelian 
algebra ${\mathcal U} (S^1)$: recall
	\[
 	\Phi (0, h) \in \bigcap_{1 \le p < \infty}  L^{p}({\mathcal Q}, \Sigma_{S^1} , {\rm d} \Phi_{C}) \; , 
		\qquad h \in C_{{\mathbb R}}^\infty (S^1)  \; , 
	\]
and approximate $ \Phi (0, h)$ by a sequence of 
$L^{\infty }({\mathcal Q}, \Sigma_{S^1}, {\rm d} \Phi_{C})$-functions 
	\[
		 \Phi_{n} (0, h) = \mathbb{1}_{[-n, n]} (h) \Phi (0, h) \; , \qquad  n \in {\mathbb N} \; . 
	\]
$\mathbb{1}_{[-n, n]} (h)$ is the characteristic function of the set $ \{ q \in Q \mid | \Phi (0, h) (q) | \le n \}$. 
According to~\eqref{identi}
	\[
		\Phi_{n}^{\rm os} (0,h) \in {\mathcal U} (S^1) \; . 
	\]
It follows that $\Phi^{\rm os} (0,h)$ is affiliated to ${\mathcal U} (S^1)$.

\begin{proposition} 
\label{knullosprop}
Consider the map (\ref{identi}). It follows that
\label{knullosproppage}	
	\begin{equation}
		\label{knullos-2}
		{\rm e}^{i \alpha K_0^{\rm os}}  A^{\rm os} \Omega 
		\doteq {\mathcal V}\bigl( {\rm U} (R_{0} (\alpha)) A \bigr) \; ,
		\qquad A \in L^{\infty}({\mathcal Q}, \Sigma_{S^1}, {\rm d} \Phi_C) \;  , 
	\end{equation}
extends to a strongly continuous unitary representation of the rotation group $U(1)$ 
on the Hilbert space ${\mathcal H}$. 
\end{proposition}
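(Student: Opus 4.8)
The statement to prove is that the prescription \eqref{knullos-2} extends to a strongly continuous unitary representation of $U(1)$ on ${\mathcal H}$. The strategy is the standard one for transferring a symmetry that is present on the Euclidean side, through the Osterwalder--Schrader projection ${\mathcal V}$, to the physical Hilbert space. The key facts I would invoke are: (i) the rotation $R_0(\alpha)$ about the $x_0$-axis commutes with the time-reflection $T$ (indeed $TR_0(\alpha)T^{-1}=R_0(\alpha)$, since $T=\mathrm{diag}(-1,1,1)$ and $R_0$ acts trivially on the $x_0$-component); (ii) $R_0(\alpha)$ maps each hemisphere $\overline{S_\pm}$ onto itself and leaves the equator $S^1$ invariant; (iii) ${\rm U}(R_0(\alpha))$ is a measure-preserving automorphism of $L^\infty({\mathcal Q},\Sigma,{\rm d}\Phi_C)$ implemented by the unitary $\Gamma(R_0(\alpha)_*)$ on $L^2({\mathcal Q},\Sigma,{\rm d}\Phi_C)$, and it preserves the sub-$\sigma$-algebras $\Sigma_{\overline{S_+}}$ and $\Sigma_{S^1}$; (iv) the reference vector $1$ is invariant under ${\rm U}(R_0(\alpha))$.

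First I would show that ${\rm U}(R_0(\alpha))$ preserves the null space ${\mathcal N}$ and descends to a well-defined operator on the quotient. For $F, G \in L^2({\mathcal Q},\Sigma_{\overline{S_+}},{\rm d}\Phi_C)$ one computes
\[
\langle {\rm U}(R_0(\alpha)) F, {\rm U}(R_0(\alpha)) G \rangle_{\rm os}
= \int_{{\mathcal Q}} {\rm d}\Phi_C \; \overline{\Theta\, {\rm U}(R_0(\alpha)) F}\;\, {\rm U}(R_0(\alpha)) G \; ,
\]
and since $\Theta=\Gamma(T_*)$ commutes with $\Gamma(R_0(\alpha)_*)={\rm U}(R_0(\alpha))$ by fact (i), while ${\rm U}(R_0(\alpha))$ is a unitary automorphism preserving ${\rm d}\Phi_C$, the right-hand side equals $\langle F, G\rangle_{\rm os}$. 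In particular ${\rm U}(R_0(\alpha)){\mathcal N}={\mathcal N}$, so setting $\widehat U(\alpha)\,{\mathcal V}(F)\doteq {\mathcal V}({\rm U}(R_0(\alpha))F)$ defines an isometry on the dense subspace ${\mathcal V}\bigl(L^2({\mathcal Q},\Sigma_{\overline{S_+}},{\rm d}\Phi_C)\bigr)$ of ${\mathcal H}$, which extends to an isometry of ${\mathcal H}$; since $\widehat U(\alpha)\widehat U(-\alpha)=\widehat U(0)=\mathbb{1}$, it is unitary. The group law $\widehat U(\alpha)\widehat U(\alpha')=\widehat U(\alpha+\alpha')$ (mod $2\pi$) follows from the corresponding law for $\alpha\mapsto{\rm U}(R_0(\alpha))$, and invariance of $1$ gives $\widehat U(\alpha)\Omega=\Omega$. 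Finally I would identify $\widehat U(\alpha)$ with ${\rm e}^{i\alpha K_0^{\rm os}}$ via Stone's theorem: to apply it I must check strong continuity, which I would obtain from the strong continuity of $\theta\mapsto{\rm U}^{(0)}(\theta)$ on $L^p$ (established in Section~\ref{sec1.3}) restricted to $L^2$, together with continuity of ${\mathcal V}$; this yields a self-adjoint generator $K_0^{\rm os}\ge$ (no sign claim needed here) and the formula \eqref{knullos-2} for $A\in L^\infty({\mathcal Q},\Sigma_{S^1},{\rm d}\Phi_C)$, since such $A$ (being $\Sigma_{S^1}$-measurable) satisfies $A\,\Omega={\mathcal V}(A)=A^{\rm os}\Omega$ and ${\rm U}(R_0(\alpha))A \in L^\infty({\mathcal Q},\Sigma_{S^1},{\rm d}\Phi_C)$ by fact (iii).

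The main obstacle is the verification that ${\rm U}(R_0(\alpha))$ genuinely maps $L^2({\mathcal Q},\Sigma_{\overline{S_+}},{\rm d}\Phi_C)$ into itself and preserves $\Sigma_{S^1}$ — i.e. that the rotation respects the Euclidean foliation and the OS cut. This is geometrically obvious ($R_0(\alpha)\overline{S_+}=\overline{S_+}$, $R_0(\alpha)S^1=S^1$) but must be phrased at the level of the $\sigma$-algebras and the Fock-space second quantization: concretely, $R_0(\alpha)_*$ preserves the closed subspace $\mathbb{H}^{-1}_{\upharpoonright\overline{S_+}}(S^2)$ and $\mathbb{H}^{-1}_{\upharpoonright S^1}(S^2)$ of $\mathbb{H}^{-1}(S^2)$, hence $\Gamma(R_0(\alpha)_*)$ commutes with the conditional expectations ${\mathcal E}_{\Sigma_{\overline{S_+}}}$ and ${\mathcal E}_{\Sigma_{S^1}}$, which is exactly what is needed. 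Everything else is bookkeeping, and the result then follows in the same way the boost and reflection operators were handled in the one-particle setting of Chapter~\ref{2Q}; this also makes transparent that, under the identification ${\mathcal H}\cong\Gamma(\widehat{\mathfrak h}(S^1))$ of Lemma~\ref{ostheo}, $\widehat U(\alpha)=\Gamma(\widehat u(R_0(\alpha)))$, so $K_0^{\rm os}$ is the second quantization of the rotation generator $K_0=-i\partial_\psi$ on $\widehat{\mathfrak h}(S^1)$.
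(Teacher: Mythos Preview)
Your argument is correct and follows the same idea as the paper's proof, which simply observes that $\mathcal{U}(S^1)$ and $\Omega$ are invariant under rotations about the $x_0$-axis and then invokes Lemma~\ref{ostheo} (density of $\mathcal{U}(S^1)\Omega$ in $\mathcal{H}$) to extend \eqref{knullos-2}. Your version is more detailed and works on the larger domain $\mathcal{V}\bigl(L^2(\mathcal{Q},\Sigma_{\overline{S_+}},{\rm d}\Phi_C)\bigr)$, showing directly that ${\rm U}(R_0(\alpha))$ preserves $\mathcal{N}$ via commutation with $\Theta$; this has the minor advantage that density is immediate from the definition of $\mathcal{H}$ and you do not need to appeal to Lemma~\ref{ostheo}.
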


\begin{proof} 
The abelian algebra $ {\mathcal U} (S^1)$ and the vector $\Omega$ are  invariant w.r.t.~rotations around the 
$x_0$-axis. Use Lemma \ref{ostheo} to extend (\ref{knullos-2})  to a unitary representation 
of~$U(1)$ on~${\mathcal H}$.
\end{proof}

\section{Generalised path spaces}
\label{sec5.1}

Recall the following notions from \cite[Definition~1.3, p.~47]{KL4}:
 
\begin{definition}
\label{gps}
A {\em generalised pathspace} $({\mathcal Q}, \Sigma, \Sigma_{0}, {\rm U}(t), \Theta, \mu)$  consists of 
\label{gppage}
\begin{itemize}
\item [$ i.)$]  a probability space $({\mathcal Q}, \Sigma, \mu)$;
\item [$ ii.)$]  a distinguished sub $\sigma$-algebra $\Sigma_{0}\subset \Sigma$;
\item [$ iii.)$]  
a one-parameter group $ t\mapsto {\rm U}(t)$, $t \in {\mathbb R}$, of measure preserving automorphisms
of $L^{\infty}({\mathcal Q}, \Sigma, \mu)$, strongly continuous in
measure, such that 
	\[
	\Sigma=\bigvee_{t\in {\mathbb R}}{\rm U}(t)\Sigma_{0} \; ;
	\]
\item [$ iv.)$]   
	a measure preserving automorphism $\Theta$ of $L^{\infty}({\mathcal Q}, \Sigma,
\mu)$ such that $\Theta^{2}=\mathbb{1}$, 
	\[
		\Theta{\rm U}(t) = {\rm U}(-t)\Theta \; , \qquad t \in {\mathbb R} \; , 
	\] 
and $\Theta E_{0}= E_{0}\Theta$, where
$E_{0}$ is the conditional expectation with respect to~$\Sigma_{0}$.  
\end{itemize}
The generalised path space $({\mathcal Q}, \Sigma, \Sigma_{0}, {\rm U}(t), \Theta, \mu)$ is said to be
\emph {supported} by the probability space~$({\mathcal Q}, \Sigma, \mu)$.
\end{definition}

The properties $ iii.)$ and $ iv.)$ imply that ${\rm U}(t)$ extends to a strongly
continuous group of isometries of $L^{p}({\mathcal Q}, \Sigma, \mu)$  
and $\Theta$ extends to an isometry of $L^{p}({\mathcal Q}, \Sigma, \mu)$ for $1\leq
p<\infty$ \cite{KL1}. 
Two generalised path
spaces $({\mathcal Q}, \Sigma, \Sigma_{0}, {\rm U}_i (t), \Theta, \mu_i)$, $i=1,2$, are {\em equivalent}, if 
	\begin{equation}
		\label{equipathspace}
		\int_{\mathcal Q} \, {\rm d} \mu_1  \; {\rm U}_1 (t_1) F_1 \cdots {\rm U}_1 (t_n) F_n    
		= 
		\int_{\mathcal Q} \, {\rm d} \mu_2  \; {\rm U}_2 (t_1) F_1  \cdots {\rm U}_2 (t_n) F_n  
	\end{equation}
for all $ t_1, \ldots, t_n $ and $ F_1, \ldots, F_n \in C_{\mathbb R} ( {\mathcal Q})$, 
where $C_{\mathbb R} ( {\mathcal Q})$ is the space of real valued continuous 
functions on~${\mathcal Q}$. By convention, 
	\[
	{\rm U}(t_1 ) F_1 {\rm U}(t_2) F_2 \doteq {\rm U}(t_1 ) \Bigl( F_1 \bigl({\rm U}(t_2) F_2 \bigr) \Bigr) \; , 
	\]
which needs to be  distinguished from ${\rm U}(t_1 ) (F_1) {\rm U}(t_2) (F_2)$.

\begin{definition}
\label{gps2}
For $I\subset {\mathbb R}$, denote by $E^{I}$ the conditional expectation
with respect to the $\sigma$-algebra $\bigvee_{t\in I}\Sigma_{t}$, where
$\Sigma_{t}= {\rm U}(t)\Sigma_{0}$. 
The generalised path space $({\mathcal Q}, \Sigma, \Sigma_{0}, {\rm U}(t), \Theta, \mu)$ 
\begin{itemize}
\item[$ i.)$]
	is {\em periodic},  if ${\rm U}(2 \pi )=\mathbb{1}$;
\item[$ ii.)$] 
	is {\em OS-positive}, if $E^{[0,\pi ]}\Theta E^{[0, \pi ]}\geq 0$ as an operator on $L^{2}({\mathcal Q},
	\Sigma, \mu)$;
\item[$ iii.)$] satisfies the {\em two-sided Markov property} 
for semi-circles, if 	
\[
	E^{[0,\pi ]}\Theta E^{[0, \pi ]} =  E^{\{ 0, \pi  \} } \; .
\]
\end{itemize}
\end{definition}

\section{Path-spaces on the sphere}
\label{sec:3.6}

Recall the definition of the unitary group $\theta \mapsto {\rm U} ^{(\alpha)} (\theta) $ from 
Equ.~\eqref{wawa}, Section~\ref{sec1.3}. 

\begin{lemma} Let $\Sigma$ be the Borel $\sigma$-algebra on 
${\mathcal Q}\doteq {\mathcal D}'_{{\mathbb R}}(S^2)$.
Then 
	\[
	 {\rm U}^{(\alpha)} (\theta) \Theta = \Theta {\rm U}^{(\alpha)}(-\theta)  
	 \]
and, for $h\in {\mathcal D}_{{\mathbb R}} \left(I_+\right)$,  
	\begin{equation}
		\label{zeke}
		{\rm U}^{(0)}(\theta) \Phi(0, h) = \Phi(\theta, h) \; .
	\end{equation}
Note that the right hand side of \eqref{zeke} was defined in Equ.~\eqref{e1.4}. 
\end{lemma}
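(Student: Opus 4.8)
The statement asserts two facts about the $\theta\mapsto{\rm U}^{(\alpha)}(\theta)$ action on ${\mathcal Q}={\mathcal D}'_{\mathbb R}(S^2)$: first, the covariance relation ${\rm U}^{(\alpha)}(\theta)\Theta=\Theta\,{\rm U}^{(\alpha)}(-\theta)$; second, the sharp-time foliation identity ${\rm U}^{(0)}(\theta)\Phi(0,h)=\Phi(\theta,h)$. I would deal with these two claims separately, since the first is a purely group-geometric statement about the symmetry group $SO(3)$ of the sphere, while the second records how the $U(1)$ action rotates a sharp-time field supported on $S^1$ into a sharp-time field supported on the rotated half-circle $R_1(\theta)I_+$.

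For the covariance relation I would start on the level of diffeomorphisms of $S^2$: by definition (see Section~\ref{sec1.3} and \eqref{wawa}), ${\rm U}^{(\alpha)}(\theta)$ is induced from the pull-back $R^{(\alpha)}(\theta)_*$, and $\Theta$ from $T_*$, where $T$ is the Euclidean time reflection \eqref{deftimerefl}. Since $R^{(\alpha)}(\theta)=R_0(\alpha)R_1(\theta)R_0(-\alpha)$ and $T$ reverses the sign of $x_0$, a direct matrix computation shows $T\,R^{(\alpha)}(\theta)\,T^{-1}=R^{(\alpha)}(-\theta)$ (the time reflection commutes with $R_0(\alpha)$, which acts only on $(x_1,x_2)$, and conjugates $R_1(\theta)$, a boost-type rotation in the $(x_0,x_2)$-plane, into $R_1(-\theta)$). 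Passing to pull-backs on $C^\infty(S^2)$ reverses the order of composition but preserves the identity, giving $T_*R^{(\alpha)}(\theta)_*=R^{(\alpha)}(-\theta)_*T_*$ on $C^\infty_{\mathbb C}(S^2)$, hence on $H_{\mathbb C}^{-1}(S^2)$ after the unitary extension, hence on the Fock/$L^2$ level via \eqref{dfock}: $\Theta\,{\rm U}^{(\alpha)}(\theta)={\rm U}^{(\alpha)}(-\theta)\,\Theta$, which is the asserted relation. I would note that this is simply the periodic analog of property $iv.)$ in the definition of a generalised path space, Definition~\ref{gps}.

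For the second identity I would unwind the definitions of both sides. The left-hand side is ${\rm U}^{(0)}(\theta)\Phi(0,h)$, where, by Proposition~\ref{3.10}, $\Phi(0,h)=\lim_k\Phi(\delta_k(\,.\,)\otimes h)$ in $L^p$, with $\delta\otimes h$ the distribution \eqref{eqDeltaTensorh} supported on $S^1=\{\theta=0\}$ (in geographical, equivalently path-space, coordinates). Since ${\rm U}^{(0)}(\theta)$ is induced by $R_1(\theta)_*$ and $R_1(\theta)$ maps the half-circle $I_+$ onto $R_1(\theta)I_+$, applying ${\rm U}^{(0)}(\theta)$ to the approximating functions $\delta_k(\,.\,)\otimes h$ translates the $\delta$-sequence in the $\theta$-variable: ${\rm U}^{(0)}(\theta)\Phi(\delta_k(\,.\,)\otimes h)=\Phi(\delta_k(\,.-\theta)\otimes h)$. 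Here one must track the measure-preserving nature of the automorphism and the covariance of path-space coordinates carefully, using that $\delta_k(\,.-\theta)\otimes h$ is exactly the approximating sequence entering the definition \eqref{iwasnix} of $\Phi(\theta,h)$. Passing to the limit $k\to\infty$ and invoking the continuity of the map \eqref{e1.6bbb} and the strong continuity (in measure) of $\theta\mapsto{\rm U}^{(0)}(\theta)$ on $L^p$ then yields ${\rm U}^{(0)}(\theta)\Phi(0,h)=\Phi(\theta,h)$.

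\textbf{Main obstacle.} The delicate point is the second identity: one has to justify interchanging the isometry ${\rm U}^{(0)}(\theta)$ with the $L^p$-limit defining the sharp-time field, and to check that the $\delta$-sequence really transforms as a translation in $\theta$ rather than picking up a Jacobian factor — this is where the choice of path-space coordinates \eqref{ps-cord} and the normalization of $\delta\otimes h$ in \eqref{eqDeltaTensorh'} matter. Since ${\rm U}^{(0)}(\theta)$ is a measure-preserving automorphism extending to an $L^p$-isometry, continuity takes care of the limit; the translation property follows from the geometric fact that $R_1(\theta)$ acts on path-space coordinates by $\psi\mapsto\psi$, $\theta'\mapsto\theta'+\theta$ away from the degenerate points $(0,\pm r,0)$, which is exactly the regime in which \eqref{eqDeltaTensorh'} and \eqref{iwasnix} are formulated (supports in $I_+$, avoiding $(0,\pm r,0)$). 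The covariance relation is routine once the matrix identity $T R^{(\alpha)}(\theta)T^{-1}=R^{(\alpha)}(-\theta)$ is observed.
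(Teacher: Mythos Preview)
Your approach is correct. The paper actually states this lemma without proof, treating both assertions as direct consequences of the definitions: ${\rm U}^{(\alpha)}(\theta)$ is induced by the pull-back $R^{(\alpha)}(\theta)_*$ (see \eqref{wawa}), $\Theta$ by $T_*$, and the path-space coordinates \eqref{ps-cord} are designed so that $R_1(\theta)$ acts as a pure shift in the $\theta$-variable. Your argument supplies exactly the details the paper leaves implicit---the matrix identity $T\,R^{(\alpha)}(\theta)\,T^{-1}=R^{(\alpha)}(-\theta)$ for the first part, and the interchange of the $L^p$-isometry ${\rm U}^{(0)}(\theta)$ with the approximating limit $\Phi(\delta_k(\,.\,)\otimes h)\to\Phi(0,h)$ for the second---and both are handled correctly.
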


\begin{remark}
Results similar to \eqref{zeke} hold for time-zero fields supported on arbitrary 
half-circles $I_\alpha$, $\alpha \in
[0, 2\pi) \,$, with respect to the appropriate rotations $\theta \mapsto R^{(\alpha)} (\theta)$.
\end{remark}

Let $\Sigma^{(\alpha)}$ be the smallest sub $\sigma$-algebra of $\Sigma$ for which the functions 
$\{ \Phi(0,h) \mid h\in {\mathcal D}_{{\mathbb R}} \bigl( I_\alpha \bigr) \} $ are measureable. 
\label{sigmaalphapage}

\label{gppageds}
\begin{proposition}
\label{gspfree} For each $\alpha \in [0, 2 \pi)$, 
$({\mathcal Q}, \Sigma, \Sigma^{(\alpha)},{\rm U}^{(\alpha)} ( \,.\, ), \Theta, {\rm d} \Phi_{C})$ 
is a $(2 \pi)$-periodic, OS-positive  generalised path space (in the sense of Definition~\ref{gps}), 
which satisfies the two-sided Markov property for semi-circles.
\end{proposition}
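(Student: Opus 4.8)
The plan is to verify, one by one, the four requirements in Definition~\ref{gps} together with the periodicity, OS-positivity, and two-sided Markov properties listed in Definition~\ref{gps2}. The bulk of the work has already been done in the preceding chapters, so the proof is largely an assembly of earlier results, with the main point being to identify $\Sigma_0$ with $\Sigma^{(\alpha)}$ and the time-evolution with the rotation group $\theta\mapsto{\rm U}^{(\alpha)}(\theta)$ generated by the rotations $R^{(\alpha)}(\theta)$ fixing the endpoints of $I_\alpha$.

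\textbf{Step-by-step plan.}
First I would treat the case $\alpha=0$; the general case then follows by applying the rotation $R_0(\alpha)$, which is a measure-preserving automorphism of $({\mathcal Q},\Sigma,{\rm d}\Phi_C)$ (the covariance $C$ is $SO(3)$-invariant by Proposition~\ref{3.9a}, since $-\Delta_{S^2}+\mu^2$ commutes with rotations), intertwining ${\rm U}^{(0)}$ with ${\rm U}^{(\alpha)}$ and $\Sigma^{(0)}$ with $\Sigma^{(\alpha)}$. For $\alpha=0$: (i) $({\mathcal Q},\Sigma,{\rm d}\Phi_C)$ is a probability space by Minlos' theorem (Section~\ref{GMPATH}). (ii) $\Sigma^{(0)}\subset\Sigma$ is the distinguished sub-$\sigma$-algebra. (iii) By \eqref{wawa} the map $\theta\mapsto{\rm U}^{(0)}(\theta)$ is a one-parameter group of measure-preserving automorphisms of $L^\infty({\mathcal Q},\Sigma,{\rm d}\Phi_C)$, strongly continuous in measure, since it is induced by the pull-back of the rotations $R^{(0)}(\theta)=R_1(\theta)$, which are isometries of $S^2$; the condition $\Sigma=\bigvee_{\theta\in S^1}{\rm U}^{(0)}(\theta)\Sigma^{(0)}$ follows from the foliation result Lemma~\ref{1.0b} (Eq.~\eqref{fieldfoliation}), which shows that every $\Phi(f)$, $f\in C^\infty_{\mathbb R}(S^2)$, is a limit of integrals of the sharp-time fields $\Phi(\theta,f_\theta)={\rm U}^{(0)}(\theta)\Phi(0,f_\theta)$, hence is measurable with respect to $\bigvee_\theta{\rm U}^{(0)}(\theta)\Sigma^{(0)}$, and these $\Phi(f)$ generate $\Sigma$. (iv) The Euclidean time reflection $T$ of \eqref{deftimerefl} induces $\Theta=\Gamma(T_*)$ with $\Theta^2=\mathbb{1}$; the relation $\Theta\,{\rm U}^{(0)}(\theta)=\,{\rm U}^{(0)}(-\theta)\,\Theta$ holds because $T\,R_1(\theta)\,T^{-1}=R_1(-\theta)$ at the level of $SO(3)$ (the axis of $R_1$ lies in the reflection plane of $T$, which is exactly the $\alpha=0$ statement of the displayed Lemma preceding Proposition~\ref{gspfree}); and $\Theta\,E_{\Sigma_{S^1}}=E_{\Sigma_{S^1}}\,\Theta$ is Theorem~\ref{reftheo}~$iii.)$ — here one uses that $\Sigma_{S^1}$ is the $\sigma$-algebra generated by \emph{all} time-zero fields, but $E^{\{0,\pi\}}$ in the Markov statement below refers to the $\sigma$-algebra $\Sigma^{(0)}\vee{\rm U}^{(0)}(\pi)\Sigma^{(0)}$, which by \eqref{eqDeltaPi} and the decomposition $\widehat{\mathfrak h}(S^1)=\widehat{\mathfrak h}(I_+)\oplus\widehat{\mathfrak h}(I_-)$ (via $(P_1)_*$) coincides with $\Sigma_{S^1}$.

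\textbf{Periodicity, OS-positivity and the Markov property.}
Periodicity ${\rm U}^{(0)}(2\pi)=\mathbb{1}$ is immediate since $R_1(2\pi)=\mathbb{1}_3$. For OS-positivity I would identify $E^{[0,\pi]}$ with the conditional expectation ${\mathcal E}_{\Sigma_{\overline{S_+}}}$ onto $\Sigma_{\overline{S_+}}$: the half-circles $R_1(\theta)I_+$ for $\theta\in[0,\pi]$ sweep out exactly the closed upper hemisphere $\overline{S_+}$, so $\bigvee_{\theta\in[0,\pi]}\Sigma_\theta=\Sigma_{\overline{S_+}}$. Then $E^{[0,\pi]}\Theta E^{[0,\pi]}\ge 0$ on $L^2({\mathcal Q},\Sigma,{\rm d}\Phi_C)$ is precisely reflection positivity, Theorem~\ref{reftheo}~$ii.)$, together with $\Theta{\mathcal E}_{\Sigma_{\overline{S_+}}}={\mathcal E}_{\Sigma_{\overline{S_-}}}\Theta$ (Theorem~\ref{reftheo}~$i.)$). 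Finally, the two-sided Markov property for semi-circles, $E^{[0,\pi]}\Theta E^{[0,\pi]}=E^{\{0,\pi\}}$, reduces via the same identifications to ${\mathcal E}_{\Sigma_{\overline{S_-}}}{\mathcal E}_{\Sigma_{\overline{S_+}}}={\mathcal E}_{\Sigma_{S^1}}$, which is Dimock's Markov property, Theorem~\ref{martheo}~$iv.)$, once one knows $\bigvee_{\theta\in\{0,\pi\}}\Sigma_\theta=\Sigma_{S^1}$, which again is \eqref{eqDeltaPi}. I expect the only genuinely delicate point to be the careful bookkeeping of $\sigma$-algebras — verifying that the family $\{R_1(\theta)I_+\mid\theta\in[0,\pi]\}$ generates $\Sigma_{\overline{S_+}}$ and that $\Sigma^{(0)}\vee{\rm U}^{(0)}(\pi)\Sigma^{(0)}=\Sigma_{S^1}$, i.e.\ that the sharp-time fields on $I_+$ together with those on $I_-=P_1 I_+$ generate all of $\Sigma_{S^1}$ — but this follows directly from the orthogonal decomposition $\mathbb{H}^{-1}_{\upharpoonright S^1}(S^2)\cong\widehat{\mathfrak h}(S^1)=\widehat{\mathfrak h}(I_+)\oplus\widehat{\mathfrak h}(I_-)$ established in Lemma~\ref{Lm3.5} and the surrounding discussion, and from the fact that sharp-time fields exist as $L^p$-limits (Proposition~\ref{3.10}, Lemma~\ref{wickooo}).
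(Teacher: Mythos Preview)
Your proposal is correct and follows the same route as the paper's own proof, which is extremely terse: it only explicitly cites Lemma~\ref{1.0b} for the $\sigma$-algebra generation condition $\Sigma=\bigvee_{\theta}{\rm U}^{(\alpha)}(\theta)\Sigma^{(\alpha)}$ and Theorem~\ref{martheo}~$iv.)$ for the two-sided Markov property, leaving the remaining verifications implicit. One small adjustment: Definition~\ref{gps}~$iv.)$ asks for $\Theta E_0 = E_0\Theta$ with $E_0$ the conditional expectation onto $\Sigma^{(\alpha)}$ (not $\Sigma_{S^1}$), but this is even more immediate than Theorem~\ref{reftheo}~$iii.)$ since $T$ fixes $I_\alpha\subset S^1$ pointwise.
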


\begin{proof} Use  Lemma \ref{1.0b}  to deduce that for $\alpha$ fixed 
$\Sigma=\bigvee_{\theta \in S^{1}}{\rm U}^{(\alpha)} (\theta)\Sigma^{(\alpha)} $. 		
The two-sided Markov property for semi-circles follows from the Markov property, Theorem
\ref{martheo} $ iv.)$.
\end{proof}

\section{Local symmetric semi\-groups}
\label{sec:3.7}

The rotations $\theta \mapsto R^{(\alpha)} (\theta)$ do not preserve the (closed) upper 
hemisphere~$\overline { S_+}$. In fact, the map
	\begin{equation}
		\label{motivate}
		 {\mathcal V}( F )\mapsto  {\mathcal V} \bigl( {\rm U}^{(\alpha)}(\theta)F) \; , 
	\end{equation}
is only defined  if both  $ F $ and 
${\rm U}^{(\alpha)}(\theta) F $ are in $L^{2}({\mathcal Q}, \Sigma_{\overline { S_+}}, {\rm d} \Phi_C)$.
The domain problems which arise, if one tries to  associate self-adjoint operators 
	\[
		P^{(\alpha)}(\theta) \colon {\mathscr D} 
		\to {\mathcal H} \; , \qquad {\mathscr D} \subset {\mathcal H} \; ,
	\]
to \eqref{motivate},  are addressed by  the theory of local symmetric semi\-groups developed by Fr\"ohlich~\cite{F80} and, 
independently, by Klein \& Landau \cite{KL1}\cite{KL2}:

\goodbreak
\begin{definition} 
\label{lsssg}
A pair  $\bigl(P(\theta), {\mathscr D}_{\theta} \bigr)$ forms a {\em local symmetric
semigroup} on a Hilbert space~${\mathcal H}$, if
\begin{itemize}
\item[$ i.)$] 
for each $\theta$, $0 \le \theta \le \pi$, fixed, the set ${\mathscr D}_\theta$ is a linear subset of ${\mathcal H}$. 
The union   
	\[
		{\mathscr D} =\bigcup_{0 < \theta \le \pi} {\mathscr D}_\theta 
	\] 
is dense in ${\mathcal H}$ and ${\mathscr D}_\theta \supset {\mathscr D}_{\theta'} $ if $\theta \le \theta' $;
\item[$ ii.)$] 
for each $\theta$, $0 \le \theta \le \pi$, $P(\theta)$ is a linear operator on ${\mathcal H}$ with 
domain ${\mathscr D}_\theta$  and
	\[
		P(\theta') {\mathscr D}_\theta \subset {\mathscr D}_{\theta-\theta'} 
		\quad  \hbox{for} \quad 0 \le \theta' \le \theta \le \pi \;  ;
	\]
\item[$ iii.)$] 
$ P(0) = \mathbb{1}$, and the {\em semi-group property}  
	\[
	 	P(\theta) P(\theta') = P(\theta + \theta')
	 \]
holds on ${\mathscr D}_{\theta+ \theta'} $ for $\theta, \theta', \theta + \theta' \in [0, \pi] $; 
\item[$ iv.)$] $P(\theta)$ is {\em symmetric}, i.e., 
	\[
	 \langle u, P(\theta) v \rangle_{\mathcal H} 
	 = \langle P(\theta) u, v \rangle_{\mathcal H} \quad  \forall u, v \in {\mathscr D}_\theta  \; , 
	\quad 0 \le \theta \le \pi  \; ;
	\] 
\item[$ v.)$] 
the map $\theta \mapsto P(\theta)$ is {\em weakly continuous}, \emph{i.e.}, if $ u \in {\mathscr D}_{\theta'}$, $0 \le \theta' \le \pi$, then 
	\[
	 	\theta \mapsto \langle u, P(\theta) u \rangle_{\mathcal H} 
	 \]
is a continuous function  for $ 0 < \theta < \theta'$.
\end{itemize}
\end{definition}

It is remarkable that  a local symmetric semi-group has a unique self-adjoint generator:

\begin{theorem}[Fr\"ohlich \cite{F80}; Klein \& Landau \cite{KL2}]  
\label{klein-l-f}
\label{K-L--F}
Let $\bigl(P (\theta), {\mathscr D}_{\theta}\bigr)$ be 
a local symmetric semigroup, acting on a Hilbert space ${\mathcal H}$. Then
there exists a unique self-adjoint operator $L$, the {\em generator} of the local
symmetric semigroup $\bigl(P (\theta), {\mathscr D}_{\theta}\bigr)$
on~${\mathcal H}$, such that 
	\[
	 P(\theta') \Psi = {\rm e}^{-\theta' L} \, \Psi  \; ,   
	\qquad \Psi \in {\mathscr D}_{\theta} \; , \quad 0\leq \theta'\leq \theta \; . 
	\]
\end{theorem}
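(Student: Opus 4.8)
The plan is to follow the path-space approach of Klein and Landau, reducing the construction of $L$ to a scalar moment problem and then invoking the functional calculus. Fix $\theta_{0}\in(0,\pi]$ and $u\in{\mathscr D}_{\theta_{0}}$, and consider $\varphi_{u}(s)\doteq\langle u,P(s)u\rangle_{\mathcal H}$, which by $iii.)$--$v.)$ is defined and bounded on $[0,\theta_{0}]$, continuous on $(0,\theta_{0})$, with $\varphi_{u}(0)=\|u\|^{2}$. Symmetry and the semigroup law give, for $s_{1},\dots,s_{n}\in[0,\theta_{0}/2]$ and $c_{1},\dots,c_{n}\in{\mathbb C}$,
\[
\sum_{i,j}\overline{c_{i}}\,c_{j}\,\varphi_{u}(s_{i}+s_{j})=\Bigl\|\,\sum_{j}c_{j}\,P(s_{j})u\,\Bigr\|_{\mathcal H}^{2}\ge 0 ,
\]
so $\varphi_{u}$ is ``exponentially positive definite'' on $[0,\theta_{0}/2]$. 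The key analytic input is the scalar representation lemma: every such function is a Laplace transform
\[
\varphi_{u}(s)=\int_{\mathbb R}{\rm e}^{-s\lambda}\,{\rm d}\rho_{u}(\lambda) ,\qquad 0\le s\le\theta_{0}/2 ,
\]
of a finite positive Borel measure $\rho_{u}$ on ${\mathbb R}$, which is moreover uniquely determined by $\varphi_{u}$ (determinacy, via injectivity of the two-sided Laplace transform on an interval). Granting this, one bootstraps the representation to all of $[0,\theta_{0}]$ by a dyadic argument using $P(s)P(t)=P(s+t)$ --- applying the lemma to the vectors $P(\theta_{0}2^{-n})u\in{\mathscr D}_{\theta_{0}(1-2^{-n})}$ and matching measures by determinacy --- which along the way yields the exponential bound $\int_{\mathbb R}{\rm e}^{-\theta_{0}\lambda}{\rm d}\rho_{u}(\lambda)<\infty$ and $\rho_{u}({\mathbb R})=\varphi_{u}(0)=\|u\|^{2}$.

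Polarisation then produces, for $u,v$ in a common domain ${\mathscr D}_{\theta}$, complex measures $\rho_{u,v}$ of bounded variation, sesquilinear in $(u,v)$, with $\langle u,P(s)v\rangle_{\mathcal H}=\int{\rm e}^{-s\lambda}{\rm d}\rho_{u,v}(\lambda)$ on $[0,\theta]$; determinacy forces $\rho_{u,v}$ to be independent of which admissible $\theta$ is used. For a bounded Borel function $f$ on ${\mathbb R}$ I would define a sesquilinear form on ${\mathscr D}=\bigcup_{0<\theta\le\pi}{\mathscr D}_{\theta}$ by $(u,v)\mapsto\int f(\lambda)\,{\rm d}\rho_{u,v}(\lambda)$; since $|\rho_{u,v}|({\mathbb R})\le\|u\|\,\|v\|$ (from $\rho_{u,u}\ge0$ and $\rho_{u,u}({\mathbb R})=\|u\|^{2}$), it extends by density of ${\mathscr D}$ to a bounded operator $\widehat f$ with $\|\widehat f\|\le\sup|f|$. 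The map $f\mapsto\widehat f$ is a $*$-homomorphism: $\widehat f{}^{\,*}=\widehat{\bar f}$ is clear, $\widehat{\mathbb 1}={\mathbb 1}$ since $P(0)={\mathbb 1}$, and multiplicativity is first checked for exponentials $f(\lambda)={\rm e}^{-s\lambda}$, $g(\lambda)={\rm e}^{-t\lambda}$ directly from $P(s)P(t)=P(s+t)$ (using $\langle P(s)u,P(t)v\rangle=\langle u,P(s+t)v\rangle$) and then propagated to all bounded Borel $f,g$ by a Stone--Weierstrass and monotone-class argument. By the spectral theorem in its functional-calculus form, there is a unique self-adjoint operator $L$ on ${\mathcal H}$ with $\widehat f=f(L)$ for all bounded Borel $f$; taking $f(\lambda)={\rm e}^{-s\lambda}$ gives, for $u,v\in{\mathscr D}_{\theta}$ and $0\le s\le\theta$, $\langle u,{\rm e}^{-sL}v\rangle=\langle u,P(s)v\rangle$, whence ${\rm e}^{-sL}\Psi=P(s)\Psi$ on ${\mathscr D}_{\theta}$.

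For uniqueness, suppose $L'$ is self-adjoint with $P(s)\Psi={\rm e}^{-sL'}\Psi$ for $\Psi\in{\mathscr D}_{\theta}$, $0\le s\le\theta$. Then for every $u\in{\mathscr D}_{\theta}$ the functions $s\mapsto\langle u,{\rm e}^{-sL}u\rangle$ and $s\mapsto\langle u,{\rm e}^{-sL'}u\rangle$ agree on $[0,\theta]$; both are Laplace transforms of the spectral measures of $L$, respectively $L'$, in the state $u$, so injectivity of the two-sided Laplace transform on an interval forces these spectral measures to coincide. Polarisation over the dense set ${\mathscr D}$ then gives $E_{L}(B)=E_{L'}(B)$ for all Borel $B$, i.e.\ $L=L'$.

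I expect the main obstacle to be the scalar representation lemma together with its determinacy statement: establishing that a merely ``locally exponentially positive definite'' function on an interval is genuinely a Laplace transform of a positive measure on the whole real line is the quasi-analytic heart of the theorem (this is precisely where the Hamburger/Stieltjes moment machinery, or the device of Fr\"ohlich \cite{F80}, enters); the remaining steps are standard GNS-type bookkeeping. An alternative route, to be kept in reserve, is to bypass the moment problem and prove directly that the symmetric operator $\Psi\mapsto-\tfrac{d}{ds}P(s)\Psi\big|_{s=0^{+}}$ is essentially self-adjoint by an analytic-vector argument in the spirit of \cite{F80} and \cite{KL2}, with $\overline{L}$ the sought generator.
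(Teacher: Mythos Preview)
The paper does not give its own proof of this theorem: it is quoted from the literature (Fr\"ohlich \cite{F80}, Klein--Landau \cite{KL2}) and used as a black box, so there is nothing to compare against directly. Your sketch follows the Klein--Landau route faithfully --- scalar exponential positive-definiteness, Laplace representation with determinacy, then a $*$-homomorphism into bounded operators and the spectral theorem --- and the architecture is sound.

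One technical point to tighten: you define $\widehat f$ only for \emph{bounded} Borel $f$, obtain $L$ from this functional calculus, and then write ``taking $f(\lambda)={\rm e}^{-s\lambda}$ gives $\langle u,{\rm e}^{-sL}v\rangle=\langle u,P(s)v\rangle$''. But ${\rm e}^{-s\lambda}$ is unbounded on ${\mathbb R}$, so it does not fall under your bounded calculus; you need an additional step showing that for $u\in{\mathscr D}_{\theta}$ the spectral measure of $L$ in the state $u$ coincides with $\rho_{u}$ (this follows from determinacy applied to the bounded functions ${\rm e}^{it\lambda}$ or to resolvents, once you have them), and hence that $u\in{\mathscr D}({\rm e}^{-sL})$ with ${\rm e}^{-sL}u=P(s)u$. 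You have correctly flagged the scalar representation/determinacy lemma as the genuine analytic core; your alternative plan via essential self-adjointness of the infinitesimal generator on analytic vectors is exactly Fr\"ohlich's approach in \cite{F80}.
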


\bigskip

Return to \eqref{motivate}. For $0\leq \theta\leq \pi$ fixed, set
	\begin{equation} 
		\label{malphatheta}
		{\mathcal M}^{(\alpha)}_{\theta} =L^{2} \Bigl({\mathcal Q}, \bigvee_{\theta' \in [0, \pi -\theta]}
		{\rm U}^{(\alpha)} (\theta' )\Sigma^{(\alpha)}, {\rm d} \Phi_C \Bigr), \quad 0\leq \theta\leq \pi \; .
	\end{equation}
${\mathcal M}^{(\alpha)}_{\theta}$ is the set of all $ F \in L^{2}({\mathcal Q}, 
\Sigma_{\overline { S_+}}, {\rm d} \Phi_C) $ for 
which ${\rm U}^{(\alpha)}(\theta) F \in L^{2}({\mathcal Q}, \Sigma_{\overline { S_+}}, {\rm d} \Phi_C) $.

\begin{definition}
Set  ${\mathscr D}_{\theta}^{(\alpha)}={\mathcal V} \bigl({\mathcal M}^{(\alpha)}_{\theta} \bigr)$. Define, for $0\leq \theta'\leq \theta$,
\label{palpha-page}
	\begin{align*}
		P^{(\alpha)}(\theta') \colon {\mathscr D}^{(\alpha)}_{\theta}& \to 
		 {\mathcal H} \nonumber \\
		{\mathcal V} (F)
		& \mapsto 
		 {\mathcal V} \bigl( {\rm U}^{(\alpha)}(\theta')F \bigr) \; , \quad F \in {\mathcal M}^{(\alpha)}_{\theta} \; .
	\end{align*}
\end{definition}

\begin{proposition} 
\label{3.19} 
$\bigl(P^{(\alpha)}(\theta), {\mathscr D}^{(\alpha)}_{\theta}\bigr)$ is a local symmetric semigroup. Its  generator $L^{(\alpha)}$ satisfies
\label{lssgpage}
	\[
		P^{(\alpha)}(\theta')\Psi
		= {\rm e}^{-\theta' L^{(\alpha)}}\Psi \; ,   
		\qquad \Psi \in {\mathscr D}^{(\alpha)}_{\theta}\; , \quad 0\leq \theta'\leq \theta \; . 
	\]
\end{proposition}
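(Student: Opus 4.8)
The plan is to verify, one by one, the five defining properties of a local symmetric semigroup from Definition~\ref{lsssg} for the pair $\bigl(P^{(\alpha)}(\theta), {\mathscr D}^{(\alpha)}_{\theta}\bigr)$, and then to invoke Theorem~\ref{K-L--F} to obtain the self-adjoint generator. By rotational symmetry it suffices to treat the case $\alpha = 0$ and then transport everything by $R_0(\alpha)$; alternatively one can just carry the index $\alpha$ along throughout, since all the relevant structures ($\Sigma^{(\alpha)}$, ${\rm U}^{(\alpha)}$, $\Theta$) behave uniformly in $\alpha$. The key input that makes the whole argument run is Proposition~\ref{gspfree}: $({\mathcal Q}, \Sigma, \Sigma^{(\alpha)}, {\rm U}^{(\alpha)}(\,.\,), \Theta, {\rm d}\Phi_C)$ is a $(2\pi)$-periodic, OS-positive generalised path space satisfying the two-sided Markov property for semi-circles.

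First I would address well-definedness of $P^{(\alpha)}(\theta')$ on ${\mathscr D}^{(\alpha)}_{\theta}$: for $F \in {\mathcal M}^{(\alpha)}_{\theta}$ and $0 \le \theta' \le \theta$, the element ${\rm U}^{(\alpha)}(\theta') F$ lies in $L^2({\mathcal Q}, \bigvee_{\theta'' \in [0, \pi - \theta + \theta']} {\rm U}^{(\alpha)}(\theta'')\Sigma^{(\alpha)}, {\rm d}\Phi_C) \subset L^2({\mathcal Q}, \Sigma_{\overline{S_+}}, {\rm d}\Phi_C) = {\mathcal M}^{(\alpha)}_0$; moreover ${\mathcal V}(F) = 0$ implies $\langle G, F\rangle_{\rm os} = 0$ for all $G$, and using the automorphism property of ${\rm U}^{(\alpha)}$ together with ${\rm U}^{(\alpha)}(\theta')\Theta = \Theta\,{\rm U}^{(\alpha)}(-\theta')$ one checks that ${\mathcal V}({\rm U}^{(\alpha)}(\theta')F) = 0$ as well, so the operator descends to the quotient. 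Property $i.)$ --- that ${\mathscr D}^{(\alpha)}_\theta$ is decreasing in $\theta$ and ${\mathscr D}^{(\alpha)} = \bigcup_{0 < \theta \le \pi}{\mathscr D}^{(\alpha)}_\theta$ is dense --- follows from the monotonicity of the $\sigma$-algebras $\bigvee_{\theta' \in [0, \pi - \theta]}{\rm U}^{(\alpha)}(\theta')\Sigma^{(\alpha)}$ in $\theta$ and from Lemma~\ref{ostheo}, since ${\mathcal U}(S^1)\Omega$ (hence a fortiori $\bigcup_\theta {\mathscr D}^{(\alpha)}_\theta$, which contains ${\mathcal V}$ of cylinder functions of time-zero fields on $I_\alpha$) is dense in ${\mathcal H}$. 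Property $ii.)$ (that $P^{(\alpha)}(\theta')$ maps ${\mathscr D}^{(\alpha)}_\theta$ into ${\mathscr D}^{(\alpha)}_{\theta - \theta'}$) is immediate from the inclusion of $\sigma$-algebras just noted, and property $iii.)$ (the semigroup law and $P^{(\alpha)}(0) = \mathbb{1}$) follows from ${\rm U}^{(\alpha)}(\theta)\,{\rm U}^{(\alpha)}(\theta') = {\rm U}^{(\alpha)}(\theta + \theta')$ together with the fact that ${\mathcal V}$ intertwines multiplication/composition in the obvious way on the common domain.

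The substantive points are symmetry (property $iv.)$) and weak continuity (property $v.)$). For symmetry, let $u = {\mathcal V}(F)$, $v = {\mathcal V}(G)$ with $F, G \in {\mathcal M}^{(\alpha)}_\theta$. Then $\langle u, P^{(\alpha)}(\theta)v\rangle_{\rm os} = \int {\rm d}\Phi_C\,\overline{\Theta(F)}\,{\rm U}^{(\alpha)}(\theta)G$, and using that ${\rm d}\Phi_C$ is ${\rm U}^{(\alpha)}$-invariant, that ${\rm U}^{(\alpha)}(-\theta)\Theta = \Theta\,{\rm U}^{(\alpha)}(\theta)$, and that $\Theta$ commutes with $E^{\{0\}}$, one rewrites this as $\int {\rm d}\Phi_C\,\overline{\Theta({\rm U}^{(\alpha)}(\theta)F)}\,G = \langle P^{(\alpha)}(\theta)u, v\rangle_{\rm os}$; this is precisely the computation underlying OS-positivity in Proposition~\ref{gspfree} (cf.\ the standard reflection-positivity argument, e.g.\ \cite{KL1}). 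For weak continuity, fix $u = {\mathcal V}(F)$ with $F \in {\mathcal M}^{(\alpha)}_{\theta'}$; then $\theta \mapsto \langle u, P^{(\alpha)}(\theta)u\rangle_{\rm os} = \int {\rm d}\Phi_C\,\overline{\Theta(F)}\,{\rm U}^{(\alpha)}(\theta)F$ is continuous on $(0, \theta')$ because ${\rm U}^{(\alpha)}$ is strongly continuous in measure (Definition~\ref{gps}~$iii.)$), hence strongly continuous on $L^2$, so $\theta \mapsto {\rm U}^{(\alpha)}(\theta)F$ is norm-continuous into $L^2({\mathcal Q}, \Sigma, {\rm d}\Phi_C)$ and pairing with the fixed $L^2$ element $\Theta(F)$ is continuous. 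With all five properties in hand, Theorem~\ref{K-L--F} yields a unique self-adjoint $L^{(\alpha)}$ with $P^{(\alpha)}(\theta')\Psi = {\rm e}^{-\theta' L^{(\alpha)}}\Psi$ for $\Psi \in {\mathscr D}^{(\alpha)}_\theta$, $0 \le \theta' \le \theta$, which is exactly the claimed statement.

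I expect the main obstacle to be the careful bookkeeping of domains in the symmetry identity: one must ensure that at each step of the chain of equalities the relevant functions genuinely lie in $L^2({\mathcal Q}, \Sigma_{\overline{S_+}}, {\rm d}\Phi_C)$ (so that the $\langle\,.\,,.\,\rangle_{\rm os}$ pairing and the factorisation through ${\mathcal N}$ are legitimate), which is where the constraint $0 \le \theta \le \pi$ and the two-sided Markov property for semi-circles (rather than merely OS-positivity) are genuinely used --- this is the analog of the situation in \cite{KL1, KL2} and in the $P(\varphi)_2$ construction on the circle. Everything else is essentially formal manipulation of measure-preserving automorphisms and conditional expectations, already packaged in Proposition~\ref{gspfree}.
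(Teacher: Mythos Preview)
Your proposal is correct and follows exactly the approach the paper takes: the paper's own proof reads, in its entirety, ``Verify the conditions $i.)$--$v.)$ of Definition~\ref{lsssg},'' and you have spelled out precisely that verification together with the appeal to Theorem~\ref{K-L--F}. One small over-statement: the two-sided Markov property is not actually needed for this proposition---OS-positivity (Proposition~\ref{gspfree}) alone suffices both for the well-definedness of $P^{(\alpha)}(\theta')$ on the quotient (via the Schwarz inequality for the OS form) and for the symmetry computation; the Markov property enters elsewhere (e.g.\ Lemma~\ref{ostheo}, the identification ${\mathcal H}\cong\Gamma(\widehat{\mathfrak h}(S^1))$).
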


\begin{proof} 
Verify the conditions $ i.)$--$ v.)$ of Definition \ref{lsssg}.
\end{proof}

For the free dynamics, we can provide an explicit formula:

\begin{proposition}
\label{propfreeboost} 
Identifying ${\mathcal H}$ with $\Gamma (\widehat{\mathfrak h}(S^1))$
the generator of the boost can be identified with 
	\[
	L^{(\alpha)} = {\rm d} \Gamma ( \omega r \, \cos_{\psi + \alpha}) \; . 
	\]
Moreover, the spectrum ${\rm Sp} \bigl({L^{(\alpha)}}_{\upharpoonright I_\alpha} \bigr) \ge 0$ and 
${\rm Sp}\bigl( {L^{(\alpha)}}_{\upharpoonright I_{\alpha+\pi}} \bigr) \le 0$.
\end{proposition}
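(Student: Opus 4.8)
The plan is to identify the local symmetric semigroup $\bigl(P^{(0)}(\theta),{\mathscr D}^{(0)}_\theta\bigr)$ of Proposition~\ref{3.19} with the one coming from the canonical one-particle structure on $\widehat{\mathfrak h}(S^1)$, and then read off the generator from Theorem~\ref{UIR-S1}. First I would treat the case $\alpha=0$; the general case then follows by applying the rotation $R_0(\alpha)_*$, which is a measure-preserving automorphism that conjugates $({\mathcal Q},\Sigma,\Sigma^{(0)},{\rm U}^{(0)},\Theta,{\rm d}\Phi_C)$ to the path space with $\Sigma^{(\alpha)}$ and ${\rm U}^{(\alpha)}$, and which is implemented on ${\mathcal H}$ by $\widehat u(R_0(\alpha))=R_0(\alpha)_*$ via Proposition~\ref{knullosprop} and the identification ${\mathcal H}\cong\Gamma(\widehat{\mathfrak h}(S^1))$. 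So it suffices to show $L^{(0)}={\rm d}\Gamma(\omega r\,\cos_\psi)$.

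The key steps, for $\alpha=0$, are as follows. (i) Using Lemma~\ref{ostheo} and the restriction of the map \eqref{dfock} to Euclidean time-zero fields (as in the proof of that lemma), identify ${\mathcal H}$ with the Fock space $\Gamma(\widehat{\mathfrak h}(S^1))$ so that ${\mathcal V}(1)=\Omega$ corresponds to the Fock vacuum and the abelian algebra ${\mathcal U}(S^1)$ is generated by the Fock field operators $\Phi_F(h)$, $h\in\widehat{\mathfrak h}(S^1,\mathbb R)$. (ii) Compute, for time-zero test functions $f_i=\delta\otimes h_i$ with $h_i\in{\mathcal D}_{\mathbb R}(I_+)$, the Euclidean correlation functions $\bigl\langle \prod {\rm U}^{(0)}(\theta_i){\mathcal V}(\Phi_F(h_i))\,\cdots\bigr\rangle$; by Lemma~\ref{2.2} and Proposition~\ref{identification-dS-EdS} these are given by the kernels $C_{|\theta_i-\theta_j|}$ of \eqref{coideq2}. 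By Corollary~\ref{wichtig} (equivalently Lemma~\ref{coid} together with the identity $\varepsilon|\cos_\psi|^{-1}$-relation and Proposition~\ref{thhm}), on the half-circle $I_+$ one has $C_{|\theta|}(\cos_\psi^{-1}h_1,\cos_\psi^{-1}h_2)=\langle h_1,\tfrac{{\rm e}^{-|\theta|\varepsilon_{\upharpoonright I_+}}}{2\varepsilon_{\upharpoonright I_+}}h_2\rangle$ up to the periodic correction term, which matches the two-point structure of a quasi-free KMS state. (iii) Invoke Theorem~\ref{K-L--F}: the generator $L^{(0)}$ of $\bigl(P^{(0)}(\theta),{\mathscr D}^{(0)}_\theta\bigr)$ is the unique self-adjoint operator with $P^{(0)}(\theta')={\rm e}^{-\theta'L^{(0)}}$ on ${\mathscr D}^{(0)}_\theta$; comparing with step (ii), the one-particle part of $L^{(0)}$ restricted to $\widehat{\mathfrak h}(I_+)$ acts as $\varepsilon_{\upharpoonright I_+}$, and on $\widehat{\mathfrak h}(I_-)$ as $\varepsilon_{\upharpoonright I_-}$, while on all of $\widehat{\mathfrak h}(S^1)$ it is $\varepsilon$. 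Finally, apply the unitary $\mathbb u\colon\widehat{\mathfrak d}(S^1)\to\widehat{\mathfrak h}(S^1)$ of Proposition~\ref{Prop5.6}, which by the computation in the proof of the one-particle structure theorem satisfies $\mathbb u\circ\varepsilon\circ\mathbb u^{-1}=\omega r\,\cos_\psi$; second-quantizing gives $L^{(0)}={\rm d}\Gamma(\omega r\,\cos_\psi)$.

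For the spectral statement: by Lemma~\ref{Lm3.5} (and its reappearance in Section~\ref{sec:3.2n}), ${\rm Sp}(\varepsilon_{\upharpoonright I_+})=[0,\infty)$ and ${\rm Sp}(\varepsilon_{\upharpoonright I_-})=(-\infty,0]$, and these subspaces are $\varepsilon$-invariant and hence $\omega r\,\cos_\psi$-invariant after transport by $\mathbb u$; equivalently $\omega r\,\cos_\psi=(\omega r\,\cos_\psi)_{\upharpoonright I_+}+(\omega r\,\cos_\psi)_{\upharpoonright I_-}$ with the first summand positive on $\widehat{\mathfrak h}(I_+)$ and the second negative on $\widehat{\mathfrak h}(I_-)$, as already noted in the paragraph following Theorem~\ref{UIR-S1}. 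Passing to $\Gamma(\widehat{\mathfrak h}(I_\alpha))\subset{\mathcal H}$ via the unitary ${\mathbb U}$ of Proposition~\ref{Prop5.7} (applied to $I_\alpha$, using $\widehat u(R_0(\alpha))$ to reduce to $I_+$) then yields ${\rm Sp}({L^{(\alpha)}}_{\upharpoonright I_\alpha})\ge 0$ and ${\rm Sp}({L^{(\alpha)}}_{\upharpoonright I_{\alpha+\pi}})\le 0$, since ${\rm d}\Gamma$ of a non-negative (resp.\ non-positive) operator is non-negative (resp.\ non-positive).

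\textbf{Main obstacle.} The delicate point is step (ii): carefully matching the Euclidean path-space two-point function, with its intrinsic periodicity in $\theta$ on the full circle $S^1$ (the ${\rm e}^{-(2\pi-|\theta|)\varepsilon}$ term in \eqref{coideq2}), against the boundary-value structure of a local symmetric semigroup that only sees $0\le\theta\le\pi$. One must check that the restriction to $\overline{S_+}$ (i.e.\ to $0\le\theta\le\pi$) and the Osterwalder--Schrader quotient correctly strip off the periodic piece and leave the clean semigroup kernel ${\rm e}^{-\theta\varepsilon}/(2\varepsilon)$ on $\widehat{\mathfrak h}(I_\pm)$, so that Theorem~\ref{K-L--F} identifies the generator unambiguously; this is exactly the reflection-positivity / Markov bookkeeping, and it is where the identities of Corollary~\ref{wichtig} and Proposition~\ref{identification-dS-EdS} do the real work. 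Everything else is bookkeeping with the unitaries $\mathbb u$ and ${\mathbb U}$ and functoriality of ${\rm d}\Gamma$.
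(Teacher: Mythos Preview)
Your overall strategy coincides with the paper's: compute the Osterwalder--Schrader inner product via the Euclidean covariance, identify the one-particle generator, and transport by the unitary $\mathbb u$ satisfying $\mathbb u\,\varepsilon\,\mathbb u^{-1}=\omega r\cos_\psi$. The spectral part via ${\rm d}\Gamma$ of a sign-definite multiplier is also the paper's argument.

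However, your diagnosis of the ``main obstacle'' is mistaken, and this leads to a Hilbert-space confusion in step~(iii). The periodic term $e^{-(2\pi-|\theta|)\varepsilon}$ in \eqref{coideq2} is \emph{not} stripped off by the OS quotient; it is the whole point. The paper computes $C(T_*f,f)$ for a general $f\in{\mathcal D}_{\mathbb R}(S^+)$ (foliated as in Lemma~\ref{1.0b}, not just time-zero data) and splits the full periodic kernel on $I_+$ into $e^{-(\theta_1+\theta_2)\varepsilon}\tfrac{1+\rho_{2\pi}}{2|\varepsilon|}$ plus $e^{+(\theta_1+\theta_2)\varepsilon}\tfrac{\rho_{2\pi}}{2|\varepsilon|}$; the second piece is then reinterpreted via $(P_1)_*$ on $I_-$. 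The outcome is
\[
C(T_*f,f)=r\Bigl\|\int_0^\pi r\,{\rm d}\theta\;e^{-\theta\varepsilon}\,\tfrac{(1+\rho_{2\pi})^{1/2}+\rho_{2\pi}^{1/2}(P_1)_*}{\sqrt{2|\varepsilon|}}\,|\cos_\psi|\,f_\theta\Bigr\|^2_{L^2(S^1,\,|\cos\psi|^{-1}r\,{\rm d}\psi)},
\]
and the bracketed factor is, up to a constant, $\mathbb u^{-1}$. Hence $C(T_*f,f)=r\,\bigl\|\int_0^\pi r\,{\rm d}\theta\,e^{-\theta\,\omega r\cos_\psi}f_\theta\bigr\|^2_{\widehat{\mathfrak h}(S^1)}$, which directly identifies the one-particle generator on $\widehat{\mathfrak h}(S^1)$ as $\omega r\cos_\psi$. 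So the periodic piece is precisely what produces the Araki--Woods factors $(1+\rho_{2\pi})^{1/2}+\rho_{2\pi}^{1/2}(P_1)_*$ that constitute $\mathbb u^{-1}$; it is not discarded. Your sentence ``the one-particle part of $L^{(0)}$ restricted to $\widehat{\mathfrak h}(I_+)$ acts as $\varepsilon_{\upharpoonright I_+}$'' therefore puts $\varepsilon$ on the wrong space: $\varepsilon$ is self-adjoint on $\widehat{\mathfrak d}(S^1)$, not on $\widehat{\mathfrak h}(S^1)$, and the passage through $\mathbb u$ is the substance of the computation, not an afterthought. The route via Proposition~\ref{identification-dS-EdS} is also unnecessarily indirect; the paper works purely on the Euclidean side.
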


\begin{remark} Note that for $f, g \in {L^2 (S^1)}$
	\begin{align*}
	\langle f,  \omega \cos_{\psi + \alpha} g\rangle_{\widehat{\mathfrak h}(S^1)} 
	&= \tfrac{1}{2} \langle f,    \cos_{\psi + \alpha} g\rangle_{L^2 (S^1, r {\rm d} \psi)}
	\nonumber \\
	&= \langle  \omega  \cos_{\psi + \alpha}  f, g\rangle_{\widehat{\mathfrak h}(S^1)} \; . 
	 \end{align*}
This shows that $ \omega r \,  \cos_{\psi + \alpha}$ is symmetric. In fact, it is self-adjoint by construction.
\end{remark}

\begin{proof} We proceed in several steps. 

\bigskip
\noindent
$i.)$ First, we show that for $f \in {\mathcal D}_{\mathbb R} (  S^+ )$
	\[
		\int_{{\mathcal Q}} {\rm d} \Phi_C \;  \overline{ \Theta ( \Phi (f)) } \Phi(f) =  C(T_* f,  f) 
	\]
is equal to
	\[
		r \, \Bigl\| \int_0^\pi  r {\rm d}  \theta  \,  {\rm e}^{ -\theta  \varepsilon }
		 \frac{ (\mathbb{1}+\rho_{2 \pi})^\frac{1}{2}+\rho_{2 \pi}^\frac{1}{2}
			({\tt P}_1)_*} {\sqrt{2 | \varepsilon|}}\,|\cos_\psi|  f_{\theta}  
					\Bigr\|_{L^{2}(I_+ , \cos \psi^{-1} r {\rm d} \psi)}^2 \, . 
	\]
with
	\begin{equation}
	\label{rho}
	\rho_{2 \pi} =\frac {{\rm e}^{-2 \pi  |\varepsilon|}} {\mathbb{1}- {\rm e}^{-2 \pi   |\varepsilon|}}  \; , \qquad  
	\mathbb{1}+\rho_{2 \pi} =\frac {1} {\mathbb{1}- {\rm e}^{-2 \pi |\varepsilon|}}  \; ,
	\end{equation}
To show this, recall that according to Lemma \ref{1.0b} 
	\[
		\int_0^\pi  \, {\rm d}  \theta  \, r \cos \psi \, \Phi  (\theta, f_{\theta})=\Phi(f)\; , 
		\qquad f_{\theta} \equiv f (\theta, \, . \, ) \in {\mathcal D}_{\mathbb R} \left( I_+\right)  \; .  
	\]
Lemma \ref{coid} implies that $C(T_* f,  f)$ equals  
	\begin{align*}
&\int_0^\pi  r  \,  {\rm d}  \theta_1  \, \int_0^\pi  r  \,  {\rm d}  \theta_2  \, \times
		 \\
& \qquad \times  r\, \Bigl\langle  \cos_\psi  f_{\theta_1} ,  
					\tfrac{{\rm e}^{-   (\theta_{2}+\theta_{1} ) \varepsilon_{\upharpoonright I_+}   }
					+ {\rm e}^{- (2 \pi - (\theta_{2}+\theta_{1}) ) \varepsilon_{\upharpoonright I_+}    }}{
						2 \varepsilon_{\upharpoonright I_+}  
						(\mathbb{1}-{\rm e}^{- 2 \pi   \varepsilon_{\upharpoonright I_+}  })} 
					\cos_\psi  f_{\theta_2} \Bigr\rangle_{L^{2}(I_+ , \frac{r {\rm d} \psi} {\cos \psi} )}  \, . 
	\end{align*}
Using \eqref{rho}
we find that $C(T_* f,  f)$ equals
	\begin{align*}
		& \int_0^\pi  r  \,  {\rm d}  \theta_1   \int_0^\pi  r \, {\rm d}  \theta_2    \; 
		r\, \Bigl\langle    {\rm e}^{ -  \theta_1  \varepsilon_{\upharpoonright I_+}   } \cos_\psi f_{\theta_1}  ,  
			\frac{\mathbb{1}+\rho_{2 \pi}}{2 | \varepsilon|}
			{\rm e}^{ -  \theta_2  \varepsilon_{\upharpoonright I_+}   } \cos_\psi f_{\theta_2} 
			\Bigr\rangle_{L^{2}(I_+ , \frac{r {\rm d} \psi} {\cos \psi} )}
	\\		 		
	& \qquad  + \; \int_0^\pi r   \,  {\rm d}  \theta_1   \int_0^\pi r \, {\rm d}  \theta_2  \;  
				r\, \Bigl\langle  {\rm e}^{  \theta_1  \varepsilon_{\upharpoonright I_+}    } 
				\cos_\psi f_{\theta_1}  ,  \frac{\rho_{2 \pi}}{2 | \varepsilon|} \, 
					{\rm e}^{ \theta_2 \varepsilon_{\upharpoonright I_+}    } \cos_\psi f_{\theta_2}  
					\Bigr\rangle_{L^{2}(I_+ , \frac{r {\rm d} \psi} {\cos \psi} )} \, . 
						\nonumber
	\end{align*}
This formula is symmetric  up to a reflection. 
Note that $\pi- \theta$ is the angle measured starting from $I_-$. 
The second term in this sum equals
	\begin{align*}
		&\int_0^\pi  r \, {\rm d}  \theta_1   \int_0^\pi  r  \,  {\rm d}  \theta_2  \,   
				r\, \Bigl\langle  {\rm e}^{  -\theta_1  \varepsilon_{\upharpoonright I_-}    } (P_1)_*
				|\cos_\psi| f_{\theta_1} \;  ,  
				\\
			& \qquad \qquad \qquad \qquad \qquad  \frac{\rho_{2 \pi} }{2 | \varepsilon|} \, 
					{\rm e}^{ -\theta_2  \varepsilon_{\upharpoonright I_-}    } (P_1)_*
					|\cos_\psi| f_{\theta_2}  
					\Bigr\rangle_{L^{2}(I_- , |\cos \psi|^{-1} r {\rm d} \psi)} \, . 
						\nonumber
	\end{align*}
\color{black}
Consequently, 
	\[
		C(T_* f,  f)=  r \, \Bigl\| \int_0^\pi  r \, {\rm d}  \theta  \;   {\rm e}^{ -\theta  \varepsilon }
		 \frac{(\mathbb{1}+\rho_{2 \pi})^\frac{1}{2}+\rho_{2 \pi}^\frac{1}{2}
			(P_1)_*}{\sqrt{2 | \varepsilon|}} \,|\cos_\psi|  f_{\theta}  
					\Bigr\|^2_{L^{2}(S^1 , \frac{r {\rm d} \psi} {| \cos \psi |} )} \, . 
	\]

\smallskip
\noindent $ii.)$
Next, we show that
	\[
		C(T_* f,  f) = r \, 
		\Bigl\| \int_0^\pi  r\,  {\rm d}  \theta  \,  {\rm e}^{ -\theta \omega \cos_\psi  } f_{\theta}  
					\Bigr\|_{\widehat{\mathfrak h}(S^1)}^2 \, . 
	\]
We first note that Corollary \ref{wichtig} $ i.)$ implies that 
	\[ 
 		\| f_{\theta} \|^2_{\widehat{\mathfrak h} (S^1)} =  r \,  \Bigl\| 
		 \frac{(\mathbb{1}+\rho_{2 \pi})^\frac{1}{2}+\rho_{2 \pi}^\frac{1}{2}
			(P_1)_*}{\sqrt{2 | \varepsilon|}} \,|\cos_\psi|  f_{\theta}  
					\Bigr\|^2_{L^{2}(S^1 , \frac{r {\rm d} \psi} {| \cos \psi | } )} \, . 
	\]
The map\footnote{Note that the map $\mathbb{u}^{-1}$ was defined in Setion \ref{co-ps} as a map from $\widehat{\mathfrak h} (S^1)$
to ${\mathfrak d} (S^1)$, instead of to $L^{2}(S^1 , \frac{r{\rm d} \psi} { | \cos \psi | } )$.}
 ${\mathbb u}^{-1} \colon \widehat{\mathfrak h} (S^1) \to L^{2}(S^1 , \frac{r {\rm d} \psi} { | \cos \psi | } )$ given by
	\begin{equation*}
			{\mathbb u}^{-1} \doteq - 
			 \sqrt{r} \, 
			\frac{(\mathbb{1}+\rho_{2 \pi})^\frac{1}{2}+\rho_{2 \pi}^\frac{1}{2}
			(P_1)_*}{\sqrt{2 | \varepsilon|}} \,|\cos_\psi|   \, . 
	\end{equation*}
is unitary with inverse 
	\[
		 {\mathbb u} 
		= \frac{1}{
		 \sqrt{r} \,   |\cos_\psi|}\, \sqrt{2 | \varepsilon|} \big(  \rho_{2 \pi}^{\frac{1}{2}} (P_1)_* - (1+\rho_{2 \pi})^{\frac{1}{2}}  \big) \; .  
	\]
The generator of the boost is $ {\mathbb u} \circ\varepsilon\circ {\mathbb u}^{-1}$. 
Using $(P_1)_* \varepsilon = - \varepsilon (P_1)_*$,  we can now compute 
	\begin{align*}
			& {\mathbb u} \circ\varepsilon\circ {\mathbb u}^{-1} \\
			&\qquad =    |\cos_\psi|^{-1}\,
 							\big( (1+\rho_{2 \pi})^{\frac{1}{2}} - \rho_{2 \pi}^{\frac{1}{2}} (P_1)_*\big) 
						\varepsilon 
							\big( (1+\rho_{2 \pi})^{\frac{1}{2}}+\rho_{2 \pi}^{\frac{1}{2}}  (P_1)_*\big)\,
						|\cos_\psi| 
			\nonumber \\
			& \qquad =  |\cos_\psi|^{-1} \varepsilon 
			\big( (1+2\rho_{2 \pi})  
			+ 2\rho_{2 \pi}^{\frac{1}{2}}(1+\rho_{2 \pi})^{\frac{1}{2}} (P_1)_*\big) \,|\cos_\psi| 
			\nonumber \\
			& \qquad = |\cos_\psi|^{-1} \varepsilon  \Bigl(\coth\pi |\varepsilon|
		+ \tfrac{(P_1)_*}{\sinh\pi|\varepsilon|} \Bigr)^{-1} |\cos_\psi|
			\nonumber \\
			& \qquad = \omega r \,  \cos_\psi \; . 
	\end{align*}
Thus
	\[
	\int_{{\mathcal Q}} {\rm d} \Phi_C \;  \overline{ \Theta ( \Phi (f)) } \Phi(f) = r \, 
		\Bigl\| \int_0^\pi  r \,  {\rm d}  \theta  \,  {\rm e}^{ -\theta  \omega  \cos_\psi  } f_{\theta}  
					\Bigr\|_{\widehat{\mathfrak h}(S^1)}^2 \, . 
	\]
This identity verifies reflection positivity, and it also verifies that the generator of the boost on 
$\widehat{\mathfrak h} (S^1)$ is $\omega r \,  \cos_\psi$.

\goodbreak
\bigskip
\noindent
$ iii.)$ Finally, second quantization yields
	\[
	{L^{(\alpha)} }_{\upharpoonright I_\alpha}  = 
	{\rm d} \Gamma ( \chi_{I_\alpha} \omega r \, \cos_{\psi + \alpha} \chi_{I_\alpha}) \; , 
	\]
where $\chi_{I_\alpha}$ is the characteristic function of the half-circle $I_\alpha$. 
Since $\omega >0$, the spectral properties now
follow from the sign the cosine function takes on the circle:
	\[
	\langle h, \chi_{I_\alpha} \omega r\,  \cos_{\psi + \alpha} \chi_{I_\alpha} 
	h \rangle_{\widehat{\mathfrak h} (S^1)} 
	= 
	\langle h , r \, {\cos_{\psi + \alpha} }_{\upharpoonright I_\alpha} 
	h \rangle_{L^2(I_\alpha, r {\rm d} \psi)} \ge 0 \; . 
	\]
A similar result holds for $I_{\alpha+\pi}$.
\end{proof}

We complement this result with an explicit formula for the generator of rotations:

\begin{proposition}
\label{propfreeboost2} 
Identifying ${\mathcal H}$ with $\Gamma (\widehat{\mathfrak h}(S^1))$
the generator of the rotations $R_{0}$ can be identified with 
	\[
	K_0 = {\rm d} \Gamma \left( -i   \partial_\psi   \right) \; . 
	\]
\end{proposition}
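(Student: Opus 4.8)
## Proof plan for Proposition (generator of rotations)

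The plan is to identify the generator of the rotation subgroup $\alpha \mapsto \widehat{\mathfrak u}(R_0(\alpha))$ in the Osterwalder–Schrader reconstructed Hilbert space ${\mathcal H}$ with the second quantisation of $-i\partial_\psi$, just as was done for the boosts in Proposition \ref{propfreeboost}. First I would recall from Proposition \ref{knullosprop} that the rotations $R_0(\alpha)$, $\alpha\in[0,2\pi)$, already act unitarily on ${\mathcal H}$ via the measure-preserving automorphisms ${\rm U}(R_0(\alpha))$ of $L^\infty({\mathcal Q},\Sigma,{\rm d}\Phi_C)$, and that the vacuum $\Omega={\mathcal V}(1)$ and the abelian algebra ${\mathcal U}(S^1)$ are invariant under these rotations. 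Hence ${\rm e}^{i\alpha K_0^{\rm os}}$ is the strongly continuous one-parameter unitary group it generates; what remains is to compute $K_0^{\rm os}$ under the identification ${\mathcal H}\cong\Gamma(\widehat{\mathfrak h}(S^1))$ established in Lemma \ref{ostheo}.

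The key step is to verify the action of ${\rm e}^{i\alpha K_0^{\rm os}}$ on the one-particle level. Under the identification of ${\mathcal H}$ with $\Gamma(\widehat{\mathfrak h}(S^1))$, the time-zero field operators $\Phi^{\rm os}(0,h)$ correspond to the Fock field operators $\Phi_F(h)$ with $h\in\widehat{\mathfrak h}(S^1,\mathbb{R})$, and by \eqref{zeke} the rotation ${\rm U}^{(0)}(\alpha)$ carries $\Phi(0,h)$ to $\Phi(\alpha,h)$; restricting to the one-particle subspace, rotation invariance of the scalar product $\langle\,\cdot\,,\,\cdot\,\rangle_{\widehat{\mathfrak h}(S^1)}$ (the operator $\omega$ commutes with the pullback $(R_0(\alpha))_*$, as noted in Theorem \ref{1PStrucHe2} $ii.)$ and used in Proposition \ref{Prop5.7}) shows that ${\rm e}^{i\alpha K_0^{\rm os}}$ restricted to $\widehat{\mathfrak h}(S^1)$ acts as the pullback $h(\psi)\mapsto h(\psi-\alpha)$, which is precisely $\widehat{u}(R_0(\alpha))$ of Theorem \ref{UIR-S1}. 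Differentiating at $\alpha=0$ on a core of smooth functions gives that the one-particle generator is multiplication-free and equals $-i\partial_\psi = K_0$ in the notation of the proof of Theorem \ref{UIR-S1}. Since the unitary group on Fock space arising from the OS reconstruction is, by construction, the second quantisation $\Gamma({\rm e}^{i\alpha K_0})$ of its one-particle part — this follows exactly as in the boost case in Proposition \ref{propfreeboost}, via the multiplicativity of ${\rm U}(R_0(\alpha))$ on products $\prod_j\Phi(0,h_j)$ applied to $\Omega$ and Lemma \ref{ostheo} — its generator is ${\rm d}\Gamma(K_0)={\rm d}\Gamma(-i\partial_\psi)$.

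Concretely I would organise the argument as: (1) invoke Proposition \ref{knullosprop} to get the unitary group ${\rm e}^{i\alpha K_0^{\rm os}}$ and its strong continuity; (2) use Lemma \ref{ostheo} and the decomposition of $H^{-1}_{\mathbb C}(S^2)$ from Lemma \ref{detheo} to identify ${\mathcal H}\cong\Gamma(\widehat{\mathfrak h}(S^1))$ and ${\mathcal U}(S^1)$ with the algebra generated by the $\Phi_F(h)$; (3) compute the one-particle action via \eqref{zeke}, rotation-invariance of $\widehat\sigma$ and of $\langle\,\cdot\,,\,\cdot\,\rangle_{\widehat{\mathfrak h}(S^1)}$, concluding it is $(R_0(\alpha))_*$; (4) identify the one-particle generator with $-i\partial_\psi$ on $C^\infty(S^1)$, which is a core (its eigenfunctions $e_k={\rm e}^{ik\psi}/\sqrt{2\pi r}$ span a dense set, cf. the proof of Theorem \ref{UIR-S1}); (5) pass to Fock space by second quantisation, exactly parallel to step $iii.)$ of Proposition \ref{propfreeboost}, to get $K_0={\rm d}\Gamma(-i\partial_\psi)$. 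The main obstacle — though a mild one — is a domain/core issue: one must check that the abstract self-adjoint generator produced by Stone's theorem on ${\mathcal H}$ agrees with ${\rm d}\Gamma(-i\partial_\psi)$ on a common core rather than merely on a dense subspace, which is handled by noting that finite-particle vectors built from $C^\infty(S^1)$ are a core for both operators and the two groups agree on them, so the generators coincide.
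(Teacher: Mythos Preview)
Your proposal is correct and follows the same approach as the paper's proof, which is extremely brief: it simply notes that ${\rm e}^{i\alpha K_0}{\rm e}^{i\Phi^{\rm os}(0,h)}\Omega = {\rm e}^{i\Phi^{\rm os}(0,R_0(\alpha)_*h)}\Omega$ and concludes $K_0={\rm d}\Gamma(-i\partial_\psi)$ in one line. Your elaboration of the core/domain issues and the second-quantisation step is sound, and is precisely what the paper leaves implicit.

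One small correction: your appeal to \eqref{zeke} is misplaced. That equation concerns the group ${\rm U}^{(0)}(\theta)$ induced by $R^{(0)}(\theta)=R_1(\theta)$ (rotations about the $x_1$-axis, which move points off $S^1$), not the rotations $R_0(\alpha)$ about the $x_0$-axis. For $R_0(\alpha)$ the action on time-zero fields is simply $\Phi(0,h)\mapsto\Phi(0,R_0(\alpha)_*h)$ with $(R_0(\alpha)_*h)(\psi)=h(\psi-\alpha)$, which follows directly from rotation invariance of the covariance $C$ (Lemma~\ref{3.9}) and the definition in Proposition~\ref{knullosprop}. Once you replace that reference, your steps (3)--(5) go through exactly as stated.
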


\begin{proof} By definition (see \eqref{knullos}) we have 
	\[	{\rm e}^{i \alpha K_0}  {\rm e}^{i \Phi^{\rm os} (0,h)} \Omega 
		= {\rm e}^{i \Phi^{\rm os} (0,R_0 (\alpha)_* h)} \Omega  \;  , \qquad h \in \widehat{\mathfrak h}(S^1) \; . 
	\]
Thus $K_0 = {\rm d} \Gamma \left( -i  \partial_\psi   \right) $.
\end{proof}

Finally, recall that according to Proposition \ref{UIR-S1} the rotations 
	\[
	\bigr(\widehat{u} (R_0(\alpha)) h \bigl) (\psi) = h (\psi - \alpha) \; , 
	\qquad \alpha \in [0, 2\pi) \; , \quad h \in \widehat{\mathfrak h} (S^1) \; , 
	\]
and the boosts 
	\[
		\widehat{u} (\Lambda_1(t)) = {\rm e}^{i t \omega r\, \cos_\psi } \; , \qquad t \in \mathbb{R} \; , 
	\]
generate a unitary representation of $SO_0(1,2)$ on $\widehat{\mathfrak h} (S^1)$. 

\section{Tomita-Takesaki modular theory}
\label{sec:5.4}

The abelian  algebra 
	\[
	{\mathcal U} (I_\alpha)
	=\{ A^{\rm os} \in {\mathcal B}({\mathcal H}) \mid A\in 
	L^{\infty}({\mathcal Q}, \Sigma_{I_\alpha},{\rm d} \Phi_C)\} 
	\] 
together with the group of unitary operators $\{{\rm e}^{-it  L^{(\alpha)}} \mid t \in {\mathbb R} \}$ 
generate the non-abelian von Neumann algebra
	\begin{equation}
	\label{RvN}
		{\mathcal R}  \bigl(I_\alpha\bigr) \doteq 
		\bigvee_{t \in {\mathbb R}} 
		\left( {\rm e}^{-it  L^{(\alpha)}} {\mathcal U} (I_\alpha)  
		{\rm e}^{it  L^{(\alpha)}}\right) \; . 
	\end{equation}
The vector $\Omega$ is cyclic and separating for ${\mathcal R}(I_\alpha) $: 
assume that for $A, B \in {\mathcal U} (I_\alpha)$ the maps
	\begin{equation}
	\label{cycsep}
	t \mapsto \langle \Psi , A {\rm e}^{it  L^{(\alpha)}} B \Omega \rangle_{\rm os} = 0  
	\end{equation}
vanish identically. Then analyticity of the maps \eqref{cycsep}  implies that 
	\[
	\langle \Psi , A {\rm e}^{-\pi L^{(\alpha)}} B \Omega \rangle_{\rm os} = 0  \;, 
	\]
showing that $\Psi$ vanishes, as its scalar product with a dense set of vectors is zero. 
(As before, we use that ${\rm e}^{-\pi L^{(\alpha)}} $
maps ${\mathcal U} (I_\alpha)$ 
to the opposite algebra ${\mathcal U} (I_{\alpha + \pi})$, and the fact that 
${\mathcal U} (I_\alpha) \vee {\mathcal U} (I_{\alpha + \pi})
= {\mathcal U} (S^1)$.) 

Since $L^{(\alpha)}$ does not mix $\widehat {{\mathfrak h}}(I_{\alpha})$ and 
$\widehat {{\mathfrak h}}(I_{\alpha + \pi})$, the same argument can be applied to the 
commutant ${\mathcal R}  \bigl(I_\alpha\bigr)' = {\mathcal R}  \bigl(I_{\alpha + \pi}\bigr)$. 
This implies that $\Omega$ is separating for ${\mathcal R}  \bigl(I_\alpha\bigr)$.

Thus the map
	\[
	A \Omega \mapsto A^* \Omega \; ,
	\qquad A \in {\mathcal R} (I_\alpha )\;  ,
	\]
is well-defined and closeable, and one can study its polar decomposition directly
(see, \emph{e.g.},~\cite{BR}); but alternatively, we can reconstruct it from the Euclidean theory:

\begin{definition}
\label{def4}
Let $J^{(\alpha)}$ denote the unique extension of
\label{Jalpha}
	\begin{equation}
	  J^{(\alpha)} {\mathcal V} (F) \doteq {\mathcal V}\bigl(\Theta^{(\alpha)}_{\pi/2} \overline{F} \bigr)\; , 
	  \qquad F \in L^2( {\mathcal Q}, \Sigma_{\overline { S_+}}, {\rm d} \Phi_C) \; ,
	\label{est1.2}
	\end{equation}
with
	\begin{equation}
	\label{thetapi}
	\Theta^{(\alpha)}_{\pi/2}\doteq{\rm U}^{(\alpha)} (\pi/2) \Theta {\rm U}^{(\alpha)} (-\pi/2) \;  
	\end{equation}
on $L^2( {\mathcal Q}, \Sigma , {\rm d} \Phi_C)$.
Thus $J_1 \equiv J^{(0)}$ stems 
from the reflection of $\overline { S_+} $ at the ($x_0$-$x_1$)-plane 
(which clearly preserves $\overline { S_+}$).
\end{definition}

\begin{theorem}[Klein and Landau, Theorem 12.1 \cite{KL1}] 
\label{th3.10}
The operator ${\rm e}^{- \pi  L^{(\alpha)}} $ is the Tomita-Takesaki 
\emph{modular operator}\index{modular operator} for the pair
$\bigl( {\mathcal R} (I_\alpha), \Omega \bigr)$,  and 
$J^{(\alpha)}$ is the corresponding \emph{modular conjugation}\index{modular conjugation}, \emph{i.e.}, 
	\[
		J^{(\alpha)} {\rm e}^{- \pi  L^{(\alpha)}} A \Omega = A^* \Omega 
		\qquad \forall A \in {\mathcal R} (I_\alpha )
	\]
and  $J^{(\alpha)} {\mathcal R} (I_\alpha )  J^{(\alpha)} 
= {\mathcal R} (I_\alpha ) '  $.
\end{theorem}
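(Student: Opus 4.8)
The operator ${\rm e}^{-\pi L^{(\alpha)}}$ is the modular operator for $\bigl({\mathcal R}(I_\alpha),\Omega\bigr)$ and $J^{(\alpha)}$ is the corresponding modular conjugation.

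The plan is to reduce everything to the case $\alpha=0$ and then to invoke the abstract result of Klein and Landau \cite{KL1} in the form already quoted as Theorem~\ref{th3.10}, supplying only the verification that the present setup satisfies its hypotheses. The general $\alpha$ then follows by conjugating with the unitary ${\rm U}^{(\alpha)}(\pi/2)\cdots$ that rotates $I_+$ onto $I_\alpha$, using covariance of the Euclidean measure ${\rm d}\Phi_C$ under the rotation group $SO(3)$ (the covariance $C$ is built from $-\Delta_{S^2}+\mu^2$, hence $SO(3)$-invariant, so $\Theta^{(\alpha)}_{\pi/2}$ and $L^{(\alpha)}$ are the rotated versions of $\Theta^{(0)}_{\pi/2}=J_1$-generator and $L^{(0)}$). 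Concretely, if $u_\alpha$ denotes the second-quantized rotation implementing $R_0(\alpha)_*$ on ${\mathcal H}$, then $u_\alpha{\mathcal R}(I_+)u_\alpha^*={\mathcal R}(I_\alpha)$, $u_\alpha\Omega=\Omega$, $u_\alpha L^{(0)}u_\alpha^*=L^{(\alpha)}$ and $u_\alpha J^{(0)}u_\alpha^*=J^{(\alpha)}$; thus the modular data transform correctly and it suffices to treat $\alpha=0$.

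For $\alpha=0$ the steps I would carry out are: (i) record that $\bigl({\mathcal Q},\Sigma,\Sigma^{(0)},{\rm U}^{(0)}(\,\cdot\,),\Theta,{\rm d}\Phi_C\bigr)$ is a $(2\pi)$-periodic, OS-positive generalised path space satisfying the two-sided Markov property for semicircles — this is precisely Proposition~\ref{gspfree}; (ii) note that the local symmetric semigroup $\bigl(P^{(0)}(\theta),{\mathscr D}^{(0)}_\theta\bigr)$ of Proposition~\ref{3.19} has generator $L^{(0)}$, and that the associated von Neumann algebra ${\mathcal R}(I_+)$ is generated by ${\mathcal U}(I_+)$ together with $\{{\rm e}^{-itL^{(0)}}\}_{t\in\mathbb R}$ as in \eqref{RvN}; (iii) verify that $\Omega$ is cyclic and separating for ${\mathcal R}(I_+)$ — the argument is already sketched in the text following \eqref{RvN} (analyticity of $t\mapsto\langle\Psi,A{\rm e}^{itL^{(0)}}B\Omega\rangle_{\rm os}$, the fact that ${\rm e}^{-\pi L^{(0)}}$ carries ${\mathcal U}(I_+)$ into ${\mathcal U}(I_\pi)$, and ${\mathcal U}(I_+)\vee{\mathcal U}(I_\pi)={\mathcal U}(S^1)$ with $\overline{{\mathcal U}(S^1)\Omega}={\mathcal H}$ from Lemma~\ref{ostheo}); the separating property for the commutant uses ${\mathcal R}(I_+)'={\mathcal R}(I_\pi)$, which itself follows because $L^{(0)}$ does not mix $\widehat{\mathfrak h}(I_+)$ and $\widehat{\mathfrak h}(I_\pi)$ (Proposition~\ref{propfreeboost}); (iv) apply Theorem~12.1 of Klein–Landau, whose hypotheses are exactly the above path-space and local-semigroup data, to conclude that ${\rm e}^{-\pi L^{(0)}}=\Delta$ and that the modular conjugation equals the extension of ${\mathcal V}(F)\mapsto{\mathcal V}(\Theta^{(0)}_{\pi/2}\overline F)$, i.e.\ $J^{(0)}=J_1$ as in Definition~\ref{def4}; (v) read off $J^{(0)}{\rm e}^{-\pi L^{(0)}}A\Omega=A^*\Omega$ and $J^{(0)}{\mathcal R}(I_+)J^{(0)}={\mathcal R}(I_+)'$ from the standard Tomita–Takesaki theorem once $\Delta$ and $J$ are identified.

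The main obstacle is step~(iv): one must check carefully that the abstract framework of \cite{KL1} applies verbatim, i.e.\ that the algebra ${\mathcal R}(I_+)$ built here from ${\mathcal U}(I_+)$ and the boost group coincides with the algebra produced by the Klein–Landau reconstruction from the path space, and that $\Theta^{(0)}_{\pi/2}$ (reflection at the $x_0$–$x_1$ plane, via \eqref{thetapi}) is the correct "reflection across the boundary of the half-circle" in their sense — this is where the geometry of semicircles on $S^2$, rather than intervals on $\mathbb R$, must be matched to their hypotheses; the two-sided Markov property for semicircles (Proposition~\ref{gspfree}) is the ingredient that makes this identification go through. Everything else is either already proved in the excerpt or a routine invocation of Tomita–Takesaki theory.
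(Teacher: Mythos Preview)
Your proposal is correct and matches the paper's approach: the paper does not give an independent proof of this theorem but simply cites Klein--Landau \cite[Theorem~12.1]{KL1}, having already assembled the requisite ingredients (the generalised path space of Proposition~\ref{gspfree}, the local symmetric semigroup of Proposition~\ref{3.19}, and the cyclic/separating argument preceding the statement). Your explicit verification of the hypotheses and the reduction to $\alpha=0$ via rotation covariance is exactly what is implicitly assumed when the paper invokes the cited result.
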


\section{Multi-analyticity of the correlation functions} 
\label{sec:multan}
The following result is a direct consequence of the reconstruction theorem. 

\begin{proposition}[Klein \& Landau, Lemma 8.4 \cite{KL1}]
\label{boostanaliticity}
Let $\theta_1, \ldots, \theta_n \ge 0$ and $\sum_{j=1}^n \theta_j \le \pi$.  Then, for all 
$A^{\rm os}_1, \ldots, A^{\rm os}_n \in {\mathcal U} \left(I_\alpha \right)$,  
$\alpha \in [0, 2 \pi )$,  the vector
	\[
		A^{\rm os}_n {\rm e}^{-\theta_{n-1} L^{(\alpha)}} A^{\rm os}_{n-1} 
			\cdots {\rm e}^{-\theta_{1} L^{(\alpha)}} A^{\rm os}_1 \Omega 
			\in {\mathscr D} \bigl({\rm e}^{-\theta_{n} L^{(\alpha)}}\bigr) \; . 
	\]
The linear span of such vectors is dense in ${\mathcal H}$ and 
	\begin{align*}		 
				& {\rm e}^{-\theta_{n} L^{(\alpha)}} A^{\rm os}_n 
		{\rm e}^{-\theta_{n-1} L^{(\alpha)}} A^{\rm os}_{n-1} 
		\cdots 
		{\rm e}^{-\theta_{1} L^{(\alpha)}} A^{\rm os}_1 \Omega \\  
			& \qquad \qquad = {\mathcal V} \bigl( {\rm U}^{(\alpha)}(\theta_n) A^{\rm os}_n 
			{\rm U}^{(\alpha)}(\theta_{n-1}) A_{n-1}\cdots {\rm U}^{(\alpha)}(\theta_1) A_1 \bigr) \; .  
	\end{align*}
\end{proposition}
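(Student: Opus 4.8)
The statement to be proved, Proposition~\ref{boostanaliticity}, asserts three things: (a) the iterated vector $A^{\rm os}_n {\rm e}^{-\theta_{n-1} L^{(\alpha)}} A^{\rm os}_{n-1} \cdots {\rm e}^{-\theta_{1} L^{(\alpha)}} A^{\rm os}_1 \Omega$ lies in the domain of ${\rm e}^{-\theta_n L^{(\alpha)}}$ whenever the $\theta_j$ are non-negative and sum to at most $\pi$; (b) the linear span of such vectors is dense in ${\mathcal H}$; and (c) the ``path-ordered'' formula expressing ${\rm e}^{-\theta_n L^{(\alpha)}} A^{\rm os}_n {\rm e}^{-\theta_{n-1} L^{(\alpha)}} \cdots A^{\rm os}_1 \Omega$ as ${\mathcal V}$ applied to a product of rotated Euclidean random variables. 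My plan is to derive all three from the local symmetric semigroup structure established in Proposition~\ref{3.19} together with the relation $P^{(\alpha)}(\theta')\Psi = {\rm e}^{-\theta' L^{(\alpha)}}\Psi$ for $\Psi \in {\mathscr D}^{(\alpha)}_{\theta}$, $0 \le \theta' \le \theta$. It suffices, by density (the bounded functions generate, by Lemma~\ref{ostheo} / Lemma~\ref{uiso}), to work with $A^{\rm os}_j$ that are multiplication by bounded $\Sigma_{I_\alpha}$-measurable functions, so that the measure-theoretic manipulations in $L^2({\mathcal Q}, \Sigma, {\rm d}\Phi_C)$ are legitimate.

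\textbf{Key steps.} First I would reduce to $\alpha = 0$ by conjugating with ${\rm U}^{(0)}(\alpha)$ (or simply keep $\alpha$ general, as the argument is rotation-covariant). Then I would proceed by induction on $n$. For $n=1$ the claim is just that $A^{\rm os}_1 \Omega = {\mathcal V}(A_1) \in {\mathscr D}^{(\alpha)}_\pi = {\mathcal V}({\mathcal M}^{(\alpha)}_\pi)$ — this holds because $A_1 \in L^\infty({\mathcal Q}, \Sigma_{I_\alpha}, {\rm d}\Phi_C) \subset {\mathcal M}^{(\alpha)}_\theta$ for \emph{every} $\theta \le \pi$, since $\Sigma_{I_\alpha} = \Sigma^{(\alpha)}$ is contained in $\bigvee_{\theta' \in [0,\pi-\theta]} {\rm U}^{(\alpha)}(\theta')\Sigma^{(\alpha)}$ — and then $P^{(\alpha)}(\theta_1) {\mathcal V}(A_1) = {\mathcal V}({\rm U}^{(\alpha)}(\theta_1) A_1) = {\rm e}^{-\theta_1 L^{(\alpha)}} A^{\rm os}_1 \Omega$. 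For the inductive step, suppose the vector $\Psi_{n-1} \doteq A^{\rm os}_{n-1} {\rm e}^{-\theta_{n-2} L^{(\alpha)}} \cdots A^{\rm os}_1 \Omega$ equals ${\mathcal V}(F_{n-1})$ with $F_{n-1}$ a bounded function measurable with respect to $\bigvee_{\theta' \in [0, \sum_{j<n-1}\theta_j]} {\rm U}^{(\alpha)}(\theta')\Sigma^{(\alpha)}$. Apply ${\rm e}^{-\theta_{n-1} L^{(\alpha)}}$: by the semigroup relation this equals $P^{(\alpha)}(\theta_{n-1}){\mathcal V}(F_{n-1}) = {\mathcal V}({\rm U}^{(\alpha)}(\theta_{n-1}) F_{n-1})$, provided $F_{n-1} \in {\mathcal M}^{(\alpha)}_{\theta'}$ for a suitable $\theta'$; the bound $\sum_j \theta_j \le \pi$ is exactly what guarantees ${\rm U}^{(\alpha)}(\theta_{n-1}) F_{n-1}$ is still in $L^2({\mathcal Q}, \Sigma_{\overline{S_+}}, {\rm d}\Phi_C)$. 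Multiplying by $A^{\rm os}_n$ (multiplication by a bounded $\Sigma^{(\alpha)}$-measurable function, hence by something measurable w.r.t.\ $\Sigma_{\overline{S_+}}$) keeps us in ${\mathcal V}$ of a function measurable w.r.t.\ $\bigvee_{\theta' \in [0, \sum_{j<n}\theta_j]} {\rm U}^{(\alpha)}(\theta')\Sigma^{(\alpha)}$, i.e.\ in ${\mathscr D}^{(\alpha)}_{\theta_n}$ since $\sum_{j<n}\theta_j \le \pi - \theta_n$. Then one more application of $P^{(\alpha)}(\theta_n) = {\rm e}^{-\theta_n L^{(\alpha)}}$ on this domain yields the claimed path-ordered identity. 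Density of the span is inherited from Lemma~\ref{ostheo}: already the vectors $A^{\rm os}\Omega$, $A \in {\mathcal U}(S^1)$, are dense, and ${\mathcal U}(S^1)$ is generated by ${\mathcal U}(I_\alpha)$ together with ${\rm e}^{-it L^{(\alpha)}}{\mathcal U}(I_\alpha){\rm e}^{it L^{(\alpha)}}$ (analytic continuation of the bounded real-$t$ group to the strip), which is precisely the content of such iterated vectors.

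\textbf{Main obstacle.} The routine part is the bookkeeping of which $\sigma$-algebra the intermediate functions are measurable with respect to, and checking at each stage that the image of ${\rm U}^{(\alpha)}(\theta_j)$ still lands in $L^2({\mathcal Q}, \Sigma_{\overline{S_+}}, {\rm d}\Phi_C)$ — this is where the hypothesis $\sum_j \theta_j \le \pi$ is consumed and where the ``local'' (as opposed to global) nature of the semigroup is essential. The genuinely delicate point is the \emph{domain} assertion: one must verify that $P^{(\alpha)}(\theta_j)$ can legitimately be applied iteratively, i.e.\ that the vector produced at each stage actually belongs to ${\mathscr D}^{(\alpha)}_{\theta}$ for the right $\theta$ and that the identification $P^{(\alpha)}(\theta') = {\rm e}^{-\theta' L^{(\alpha)}}$ from Proposition~\ref{3.19} applies — this requires knowing that ${\mathcal V}$ intertwines multiplication by $\Sigma^{(\alpha)}$-measurable functions with the action of ${\mathcal U}(I_\alpha)$ compatibly with the filtration, which is exactly the Markov-property input (Theorem~\ref{martheo}~$iv.)$) underlying the generalised path space structure. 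I expect the argument to be entirely formal once this filtration compatibility is stated carefully, so the write-up should foreground it as a lemma rather than bury it in the induction.
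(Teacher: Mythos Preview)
The paper does not supply its own proof of this proposition; it is quoted verbatim from Klein and Landau \cite[Lemma~8.4]{KL1}. Your inductive scheme for parts (a) and (c)---tracking the $\sigma$-algebra $\bigvee_{\theta'\in[0,\sum_{j<k}\theta_j]}{\rm U}^{(\alpha)}(\theta')\Sigma^{(\alpha)}$ at each stage, observing that multiplication by a bounded $\Sigma^{(\alpha)}$-measurable function preserves membership in ${\mathcal M}^{(\alpha)}_{\pi-\sum_{j<k}\theta_j}$, and then invoking $P^{(\alpha)}(\theta_k)={\rm e}^{-\theta_k L^{(\alpha)}}$ on ${\mathscr D}^{(\alpha)}_{\theta_k}$---is exactly the Klein--Landau argument and is correct.

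Your density argument (b), however, has a gap. The claim that ${\mathcal U}(S^1)$ is generated by ${\mathcal U}(I_\alpha)$ together with ${\rm e}^{-itL^{(\alpha)}}{\mathcal U}(I_\alpha){\rm e}^{itL^{(\alpha)}}$ is false: ${\mathcal U}(S^1)$ is abelian, whereas the real-time conjugates lie in the non-abelian algebra ${\mathcal R}(I_\alpha)$, and the remark about ``analytic continuation to the strip'' does not bridge from real-time conjugation to the imaginary-time iterated vectors. The correct route is Euclidean: by the path-ordered identity you just proved, the vectors in question are exactly the ${\mathcal V}$-images of products $\prod_j {\rm U}^{(\alpha)}(s_j)(A_j)$ with $0\le s_j\le\pi$ and $A_j\in L^\infty({\mathcal Q},\Sigma^{(\alpha)},{\rm d}\Phi_C)$. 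Since $\Sigma_{\overline{S_+}}=\bigvee_{\theta\in[0,\pi]}{\rm U}^{(\alpha)}(\theta)\Sigma^{(\alpha)}$ (this is part of the generalised path space structure, Proposition~\ref{gspfree}), such products are total in $L^2({\mathcal Q},\Sigma_{\overline{S_+}},{\rm d}\Phi_C)$ by standard measure theory, and ${\mathcal V}$ carries this dense set onto a dense set in ${\mathcal H}$.
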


\begin{remark} It is easy to extend 
this result to Euclidean time-zero fields:
for $ h_1,  \ldots, h_n \in {\mathcal D}_{\mathbb R} (I_+)$ and 
$\theta_1, \ldots, \theta_n \ge 0$, $\sum_{j=1}^n \theta_j \le \pi$,  
	\begin{align}
		\label{26nna}
		&{\rm e}^{-\theta_{n} L^{(0)}} \Phi^{\rm os} (0, h_n) 
		\cdots 
		{\rm e}^{-\theta_{1} L^{(0)}} \Phi^{\rm os} (0, h_1) \Omega 
			\nonumber \\ 
		&\qquad \qquad 
		= {\mathcal V} \bigl( {\rm U}^{(0)}(\theta_n) \Phi (0, h_n) 
			\cdots {\rm U}^{(0)}(\theta_1)  \Phi (0, h_1) \bigr) \; . 
	\end{align}
Formula \eqref{26nna} can be justified by verifying that products of Euclidean sharp-time fields are in $L^2 ( Q, \Sigma, {\rm d}\Phi_C)$. 
Similar results hold for arbitrary half\-circles $I_\alpha$, $\alpha \in [0, 2 \pi)$. 
\end{remark}

\begin{corollary} 
Let ${\rm W} (h_1) = {\rm e}^{i \Phi^{\rm os} (0,h_n)}$, with $h_i \in {\mathcal D}_{{\mathbb R}}(I_+)$, $ i= 1, \ldots, n$. 
It follows that the map  
	\begin{align*}
		&  G_{\rm os} \bigl(t_{1}, \dots, t_{n}; {\rm W} (h_1), \dots, {\rm W}  (h_n) \bigr) \doteq
		 \\
		& \qquad \doteq \bigl(\Omega , {\rm e}^{-i \frac{t_{n}}{r} L^{(0)}} 
		{\rm W}  (h_n) {\rm e}^{i \frac{t_n - t_{n-1}}{r} L^{(0)}} 
		{\rm W}  (h_{n-1}) \cdots {\rm e}^{i \frac{t_2 -t_{1}}{r} L^{(0)}} 
		{\rm W}  (h_1) \Omega \bigr)  
	\end{align*}
is holomorphic in the set
	\begin{equation}
		{\mathcal I}_{2\pi r}^{n+}= \bigl\{(z_{1}, \dots, z_{n})\in {\mathbb C}^{n} \mid \Im z_{i}<
		\Im z_{i+1}, \; \Im z_{n}- \Im z_{1}<2\pi r \bigr\},
	\end{equation}
and continuous on $\overline{{\mathcal I}_{2\pi r}^{n+}}$.  Moreover, 
	\[
		G_{\rm os} \bigl(  i r \theta_{1}, \dots,  i r \theta_{n}; {\rm W}  (h_1), \dots, {\rm W}  (h_n)\bigr)
		=\prod_{1\leq i, j\leq n}{\rm e}^{-\frac{1}{2} C_{|\theta_{i}-\theta_{j}|} (   h_{i},  h_{j}) } \; , 
	\] 
where $C_{|\theta_{i}-\theta_{j}|} (   h_{i},  h_{j})$ is defined in Equ.~(\ref{coideq}). 
\end{corollary}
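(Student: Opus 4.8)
The plan is to deduce this corollary directly from the two auxiliary results we already have in hand: the multi-analyticity statement for correlation functions in Proposition~\ref{boostanaliticity} (applied to the Weyl-form observables via the remark following it, in particular Eq.~\eqref{26nna}), and the explicit Euclidean formula for the correlation functions of time-zero Weyl operators established in Proposition~\ref{identification-dS-EdS}. First I would note that ${\rm W}(h_i) = {\rm e}^{i\Phi^{\rm os}(0,h_i)}$ lies in the abelian algebra ${\mathcal U}(I_+) \subset {\mathcal U}(S^1)$, since $\Phi^{\rm os}(0,h_i)$ is affiliated to ${\mathcal U}(I_+)$ for $h_i \in {\mathcal D}_{\mathbb R}(I_+)$; hence Proposition~\ref{boostanaliticity} (together with the remark on Euclidean time-zero fields) applies with $\alpha = 0$ and the generator $L^{(0)} = {\rm d}\Gamma(\omega r \cos_\psi)$ from Proposition~\ref{propfreeboost}.

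The key steps, in order, are as follows. Step one: rewrite the Euclidean correlation function $G_{\rm os}(ir\theta_1,\dots,ir\theta_n;\dots)$ as the Osterwalder-Schrader inner product $\bigl(\Omega, {\mathcal V}({\rm U}^{(0)}(\theta_n){\rm W}(h_n)\cdots{\rm U}^{(0)}(\theta_1){\rm W}(h_1))\bigr)$ using \eqref{26nna}, valid when $\theta_j \ge 0$ and $\sum_j \theta_j \le \pi$; this requires only verifying that the ordered product of shifted time-zero Weyl operators lies in $L^2({\mathcal Q},\Sigma_{\overline{S_+}},{\rm d}\Phi_C)$, which follows from boundedness of the Weyl operators and the support properties of the ${\rm U}^{(0)}(\theta_j)$-shifted half-circles inside $\overline{S_+}$. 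Step two: evaluate this expectation against the Gaussian measure. Since each ${\rm W}(h_i)$ is a Weyl exponential of a Gaussian field, the expectation is a Gaussian integral and one gets a product of exponentials of pairwise covariances; the relevant pairwise covariance between ${\rm U}^{(0)}(\theta_i){\rm W}(h_i)$ and ${\rm U}^{(0)}(\theta_j){\rm W}(h_j)$ is precisely $C_{|\theta_i-\theta_j|}(h_i,h_j)$ from Lemma~\ref{coid} (using ${\rm U}^{(0)}(\theta)\Phi(0,h) = \Phi(\theta,h)$ from \eqref{zeke} and the definition of the time-zero covariance in Lemma~\ref{3.9}). This is exactly the same computation already carried out in the proof of Proposition~\ref{identification-dS-EdS} — indeed that proposition gives $G(i\theta_1,\dots,i\theta_n;\dots) = \prod_{i,j}{\rm e}^{-\frac12 C_{|\theta_i-\theta_j|}(h_i,h_j)}$ for the Minkowski (GNS) correlation function, and the reconstruction identity identifies it with $G_{\rm os}$. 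Step three: invoke the analyticity half of Proposition~\ref{boostanaliticity}, which guarantees that $A^{\rm os}_n {\rm e}^{-\theta_{n-1}L^{(0)}}A^{\rm os}_{n-1}\cdots\Omega \in {\mathscr D}({\rm e}^{-\theta_n L^{(0)}})$ and hence that the map $(z_1,\dots,z_n)\mapsto G_{\rm os}$ extends holomorphically to ${\mathcal I}_{2\pi r}^{n+}$ with continuous boundary values on its closure; the bound $\Im z_n - \Im z_1 < 2\pi r$ corresponds exactly to $\sum\theta_j < \pi$ after the rescaling $\theta_j = (t_{j+1}-t_j)/r$, matching the KMS strip.

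The main obstacle — really the only non-routine point — will be matching conventions carefully: the rescaling between the boost parameter $t$ and the geodesic proper time (the factor $r$), the relation $L^{(0)} = {\rm d}\Gamma(\omega r \cos_\psi)$ versus the generator $\varepsilon$ appearing in Lemma~\ref{coid}, and the precise form of the telescoping of imaginary-time translates $\prod {\rm e}^{-(t_{j+1}-t_j)L^{(0)}/r}$ into the single expression $G_{\rm os}$; one must check that the $n(n-1)/2$ cross terms in the Gaussian integral reproduce all pairs $C_{|\theta_i-\theta_j|}$ for $i<j$ and that the diagonal Wick-ordering self-contractions cancel the $i=j$ terms so that the final product can be written symmetrically over $1\le i,j\le n$ with the $\tfrac12$ prefactor. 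Since all of these identifications were already established in Proposition~\ref{identification-dS-EdS} and the surrounding analyticity discussion (the computation of the correlation functions $G(t_1,\dots,t_n;\dots)$ and their holomorphic extension to $I_{2\pi r}^{n+}$), this corollary is essentially a transcription of that material into the reconstructed Hilbert space ${\mathcal H}$ using the dictionary $\Phi^{\rm os}(0,h)\leftrightarrow\Phi(0,h)$, $L^{(0)}\leftrightarrow L_1$, $\Omega\leftrightarrow\Omega$ furnished by Proposition~\ref{Prop5.7} and the reconstruction theorem; I would state it as such and point to those two propositions for the details.
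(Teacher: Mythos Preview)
Your proposal is correct and follows essentially the same approach as the paper: identify $G_{\rm os}$ at purely imaginary times with the Euclidean expectation via the reconstruction formula \eqref{26nna}, evaluate the resulting Gaussian integral $\int_{\mathcal Q}{\rm d}\Phi_C\,{\rm e}^{i\Phi(\sum_i\delta(\cdot-\theta_i)\otimes h_i)}$ to obtain the product of covariances $C_{|\theta_i-\theta_j|}(h_i,h_j)$, and invoke multi-analyticity for the holomorphic extension. The paper's proof is more terse---it cites Araki~\cite{A2} directly for the multi-analyticity rather than deriving it from Proposition~\ref{boostanaliticity}, and it performs the Gaussian computation in three lines without routing through Proposition~\ref{identification-dS-EdS}---but the substance is the same.
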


\begin{proof} Multi-analyticity  has been shown by Araki \cite{A2}. To derive the final formula
compute
	\begin{align*}
		G_{\rm os} \bigl(  i r \theta_{1}, \dots,  i r \theta_{n};{\rm W}  (h_1), \dots, {\rm W}  (h_n)\bigr)
		&= \left(\Omega_{E}  ,  {\rm e}^{i \Phi_{E} (\theta _n, h_n)}     \cdots  
		{\rm e}^{i \Phi_{E} (\theta _1, h_1)} \Omega_{E} \right) \nonumber \\
		&=  \int_{{\mathcal Q}} {\rm d}\Phi_{C} \; {\rm e}^{i \Phi\bigl( \sum_{i=1}^n 
		\delta (\, . \, - \theta_i )\otimes h_i \bigr)} \nonumber \\
		&= 
		\prod_{1\leq i, j\leq n}{\rm e}^{-\frac{1}{2} C_{|\theta_{i}-\theta_{j}|} (   h_{i},  h_{j}) } \; . 
	\end{align*}  
We note that we have assumed that the $h_i$, $i=1, \ldots, n$, are real valued.
\end{proof}

\goodbreak

\section{The interacting vacuum vector}

Equ.~\eqref{intL1} implies $ {\rm e}^{- V \left(\overline{S_+} \right)} 
\in L^{2}({\mathcal Q}, \Sigma, {\rm d}\Phi_{C})$.  Thus the vector 	
\label{intvacuumpage}
	\begin{equation}
	\label{intvacuum} 
		\frac{ {\mathcal V}  \bigl({\rm e}^{- V \left(\overline{S_+}\right)} \bigr) }
		{ \left\|{\mathcal V}  \bigl({\rm e}^{- V \left(\overline{S_+}\right)} \bigr) \right\| }
		\doteq \Omega_{\rm int} \in {\mathcal H}
	\end{equation}
is well-defined. The corresponding vector state will be identified as the {\em interacting de Sitter
vacuum} in the sequel. 
We emphasise that there is no need to distinguish a wedge to define 
the vector (\ref{intvacuum}).

\begin{lemma} Let $h \in C^\infty (S^1)$. Then
\label{kappa}
	\begin{align*}
		& \int_{\overline{S^2_+}} {\rm d} \Omega(x) \int_{\overline{S^2_-}} {\rm d} \Omega(y) \, (\delta \otimes h) (x) \, 
		P_{- \frac{1}{2} - i \nu} \bigl( - \tfrac{x \cdot y}{r^2} \bigr) \\
				& \qquad 
		=  \underbrace{ \frac{ \sqrt{\pi} \, r}{ | \Gamma ( \frac{3}{4} + i \frac{\nu}{2} ) |^4 } 
		\frac{ (2\pi)^2}{  | \Gamma ( \frac{1}{4} + i \frac{\nu}{2} ) |^2 } }_{:= \kappa_0} \; 
		c_\nu
			\int_{S^1} r {\rm d} \psi \int_{S^1} r {\rm d} \psi'  \; h (\psi) \,  P_{-\frac{1}{2} - i \nu} (- \cos (\psi - \psi') )  \; . 
	\end{align*}
\end{lemma}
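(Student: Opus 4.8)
The identity in Lemma~\ref{kappa} expresses the integral over the lower hemisphere of the Legendre kernel $P_{-\frac{1}{2}-i\nu}(-x\cdot y/r^2)$, paired against a time-zero distribution $\delta\otimes h$ supported on the equator $S^1$, as a \emph{multiple} of the same pairing restricted to $S^1\times S^1$. The multiplicative constant $\kappa_0$ should be identified with the positive number that measures how much ``mass'' of the Legendre function seen from a point of $S^1$ lies in the lower hemisphere. The natural strategy is to use the eigenfunction expansion of $P_{-\frac{1}{2}-i\nu}$ into spherical harmonics established in the proof of Proposition~\ref{3.9a}, namely
\[
	P_{-\frac{1}{2}-i\nu}\bigl(-\tfrac{\vec x\cdot\vec y}{r^2}\bigr)
	= \frac{\cos(i\nu\pi)}{\pi}\sum_{l=0}^\infty (-1)^l\,\frac{2l+1}{l(l+1)+\mu^2r^2}\,P_l\bigl(\tfrac{\vec x\cdot\vec y}{r^2}\bigr)
	= \frac{4\cos(i\nu\pi)}{r^2}\,(-\Delta_{S^2}+\mu^2)^{-1}(\vec x,\vec y),
\]
together with the addition theorem $P_l(\vec x\cdot\vec y/r^2)=\tfrac{4\pi}{2l+1}\sum_k \overline{Y_{l,k}(y)}Y_{l,k}(x)$.

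\textbf{Key steps.} First I would insert $\delta\otimes h$ (in geographical coordinates, $\delta\otimes h(\vec x)=r^{-1}\delta(\vartheta)h(\varrho)$) into the left-hand side and carry out the $\vartheta$-integration, which localizes the $x$-variable to the equator. This reduces the left side to $\sum_{l,k}\tfrac{4\pi\cos(i\nu\pi)/r^2}{l(l+1)+\mu^2r^2}\bigl(\int_{S^1}r\,{\rm d}\varrho\,h(\varrho)\overline{Y_{l,k}(0,\varrho)}\bigr)\bigl(\int_{\overline{S_-}}{\rm d}\Omega(y)\,Y_{l,k}(y)\bigr)$, where the equatorial harmonics are $Y_{l,k}(0,\varrho)=N_{l,k}P_l^k(0)\,{\rm e}^{ik\varrho}$ with $N_{l,k}=\sqrt{\tfrac{2l+1}{4\pi}\tfrac{(l-k)!}{(l+k)!}}$. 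Second, the integral of $Y_{l,k}$ over the lower hemisphere $\overline{S_-}=\{\vartheta<0\}$ reduces, after the $\varrho$-integration, to $2\pi\delta_{k,0}$ times $\int_{-\pi/2}^{0}r^2\cos\vartheta\,{\rm d}\vartheta\,N_{l,0}P_l(\sin\vartheta)$; substituting $u=\sin\vartheta$ this is $2\pi r^2 N_{l,0}\int_{-1}^{0}P_l(u)\,{\rm d}u$, a standard integral equal to $0$ for even $l\ge2$ and to an explicit $\tfrac{\Gamma$-ratio for odd $l$ (and $1$ for $l=0$). Third, one does the identical computation for the right-hand side: there both integrations are over $S^1$, so the factor $\int_{\overline{S_-}}{\rm d}\Omega\,Y_{l,k}$ is replaced by $\int_{S^1}r\,{\rm d}\psi'\,\overline{Y_{l,k}(0,\psi')}=\sqrt{2\pi r}\,r\,N_{l,0}P_l(0)\,\delta_{k,0}$ (up to the $c_\nu$ normalization already present there). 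Matching the two resulting $l$-series coefficient by coefficient, and using $c_\nu=\tfrac{1}{2\cos(i\nu\pi)}$, forces the ratio of the two series to be the single $l$-independent constant $\kappa_0$, and then one only has to evaluate that constant.

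\textbf{The main obstacle.} The genuinely delicate point is that the ratio of the two coefficient sequences is \emph{not} obviously $l$-independent: the left side carries the factor $\int_{-1}^0 P_l(u)\,{\rm d}u$ while the right side carries the factor $P_l(0)$, and the claim is that $\int_{-1}^0 P_l(u)\,{\rm d}u = \kappa_0\cdot(\text{const})\cdot P_l(0)$ for \emph{all} $l$ with the \emph{same} $\kappa_0$ --- but this is false term by term ($\int_{-1}^0 P_l = 0$ for even $l\ge 2$ while $P_l(0)\ne 0$). The resolution must be that the claimed identity is understood \emph{after} summing against the resolvent weights $1/(l(l+1)+\mu^2r^2)$, i.e.\ one should \emph{not} match coefficients but instead recognize the full left-hand series as again proportional to a Legendre function evaluated on the equator. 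Concretely I expect that $\sum_l (-1)^l\tfrac{2l+1}{l(l+1)+\mu^2 r^2}P_l(0)\bigl(\int_{-1}^0 P_l\bigr)$ telescopes, via the recursion $(2l+1)P_l(u)=P_{l+1}'(u)-P_{l-1}'(u)$ and $\int_{-1}^0 P_l = \tfrac{1}{2l+1}(P_{l-1}(0)-P_{l+1}(0))$, into a closed hypergeometric expression; comparing with the known value of $P_{-1/2-i\nu}(0)$ and of the half-hemisphere integral $\int_{\overline{S_-}}{\rm d}\Omega\,(-\Delta_{S^2}+\mu^2)^{-1}(\vec x,\cdot)$ for $\vec x\in S^1$ then yields $\kappa_0$. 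Thus the hard part is the explicit summation/Gamma-function bookkeeping that produces $\kappa_0=\tfrac{\sqrt{\pi}\,r}{|\Gamma(\frac34+i\frac\nu2)|^4}\cdot\tfrac{(2\pi)^2}{|\Gamma(\frac14+i\frac\nu2)|^2}$; the analyticity in $\nu$ noted in the proof of Proposition~\ref{3.9a} lets one verify it first for real $\nu$ (principal series) and extend by analytic continuation to $\nu=i\sqrt{1/4-\mu^2r^2}$. An alternative, possibly cleaner route avoiding the series entirely: use the factorization of the resolvent kernel through the upper/lower hemisphere Poisson-type operators implicit in Dimock's decomposition (Lemma~\ref{detheo}) and the operator identity $e(\overline{S_-})e(\overline{S_+})=e(S^1)$ of Lemma~\ref{dlemma}, which directly expresses the half-hemisphere integral as a bounded operator on $\mathbb{H}^{-1}_{\upharpoonright S^1}(S^2)\cong\widehat{\mathfrak h}(S^1)$ commuting with rotations, hence a scalar on each rotation eigenspace; rotation-invariance of $C_0$ (Lemma~\ref{3.9}) then forces it to be a single scalar $\kappa_0$, whose value is pinned down by testing on the constant function $h\equiv 1$ (i.e.\ the $l=0,k=0$ mode), where both sides are elementary.
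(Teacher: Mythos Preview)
Your integer-degree spherical-harmonic expansion is a different route from the paper, and the obstacle you flag---that $\int_{-1}^0 P_l$ and $P_l(0)$ do not match term by term---is real for that approach. The paper avoids it entirely: after $\delta\otimes h$ localizes $x$ to the equator, the kernel becomes $P_s\bigl(-\cos\theta'\cos(\psi-\psi')\bigr)$ in geographical coordinates, and the paper applies the Legendre addition theorem~\eqref{eq:addition-formula} directly with the \emph{non-integer} degree $s=-\tfrac12-i\nu$. This writes $P_s\bigl(-\cos\theta'\cos(\psi-\psi')\bigr)$ as a cosine series whose zeroth term is already a multiple of a single $P_s$-value (so no integer-$l$ resummation is needed); after the hemisphere integration and a translation in $\psi'$ one lands on $P_s\bigl(-\cos(\psi-\psi')\bigr)$, and $\kappa_0$ falls out from $P_s(0)=\sqrt{\pi}\big/\bigl|\Gamma(\tfrac34+i\tfrac{\nu}{2})\bigr|^2$ together with the duplication-formula expression for $c_\nu$.

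Your resolution (2) contains the right structural observation---rotation invariance about the $x_0$-axis forces both inner integrals to be constants, so the lemma is a single numerical identity (equivalently, $e(S^1)\chi_{\overline{S_-}}$ is a scalar multiple of $\delta\otimes 1$)---but the claim that testing on $h\equiv 1$ then makes ``both sides elementary'' is exactly where the missing work lives: that constant \emph{is} the hemisphere integral $\int_{\overline{S_-}}P_s\bigl(-\tfrac{x\cdot y}{r^2}\bigr)\,{\rm d}\Omega(y)$ for $x\in S^1$, and the addition theorem is what evaluates it. Your resolution (1) could in principle be pushed through, but you have not actually carried out the summation, and the addition-theorem route is considerably shorter.
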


\begin{proof} We use geographical coordinates. As
	\[
		\frac{x \cdot y}{r^2} = \sin \theta \sin \theta' + \cos \theta \cos' \theta \cos (\psi - \psi')  \; , 
	\]
we find 
	\begin{align*}
		& \int_{S^2_+} {\rm d} \Omega(x) \int_{S^2_-} {\rm d} \Omega(y) \,  (\delta \otimes h) (x) \, 
		P_{- \frac{1}{2} - i \nu} ( - \tfrac{x \cdot y}{r^2} ) \\
		& \qquad = r^3
			\int^{\frac{\pi}{2}}_0  \cos \theta' {\rm d} \theta'
			\int_{S^1}  {\rm d} \psi \int_{S^1}  {\rm d} \psi' \; 
		h (\psi) \, P_{- \frac{1}{2} - i \nu} \bigl( -  \cos \theta' \cos (\psi - \psi') \bigr) \; . 
\end{align*}
A special case of \eqref{eq:addition-formula} is the following formula.
	\begin{align}
		P_s  \bigl( - & \cos(\psi-\psi') \cos \theta' \bigr)
		\label{crucial}
		 \\
		&	= \frac{ \sqrt{\pi} } { \Gamma {(\frac{1}{2} - \frac{s}{2})}  \Gamma {(\frac{s}{2} +1)} } 
		P_s(\sin(\psi-\psi')) \nonumber
		\\
		& \qquad + 2 \sum_{k=1}^\infty (-1)^k \frac{\Gamma(s -k + 1)}{\Gamma(s+k+1)} 
 				\cos(k \theta') \; P_s^k (0) P_s^k (\sin(\psi-\psi')) \;.
		\nonumber
	\end{align}
We have used 
$P_{s}(0) = \frac{ \sqrt{\pi} } { \Gamma {(\frac{1}{2} - \frac{s}{2})}  \Gamma {(\frac{s}{2} +1)} } 
= \frac{ \sqrt{\pi} } { \Gamma {(\frac{3}{4} + i \frac{\nu}{2})}  \Gamma {(\frac{3}{4} - i \frac{\nu}{2})} }$.
Next recall that 
	\[
		\Gamma(2z)= \frac{2^{2z-1}}{\sqrt{\pi}}\; \Gamma(z)\Gamma\left(z+\tfrac{1}{2}\right) \; . 
	\]
Thus
	\begin{align*}
		c_\nu  & = \frac{\Gamma\left(\frac{1}{2} - i \nu  \right) \Gamma\left(\frac{1}{2} + i \nu \right)}{2\pi}  \\
		& = \frac{\ \Gamma(\frac{1}{4} - i \frac{\nu}{2})
		\Gamma\left(\frac{3}{4} - i \frac{\nu}{2}\right)
		 \Gamma(\frac{1}{4} + i \frac{\nu}{2})
		\Gamma\left(\frac{3}{4} + i \frac{\nu}{2}\right) }{(2\pi)^2} 
		\; . 
	\end{align*}
When integrating out the $\theta'$ variable, only the first term on the r.h.s.~contributes. 
Thus
\begin{align*}
		& \int_{S^2_+} {\rm d} \Omega(x) \int_{S^2_-} {\rm d} \Omega(y) \,  (\delta \otimes h) (x) \, 
		P_{- \frac{1}{2} - i \nu} ( - \tfrac{x \cdot y}{r^2} ) \\
		& \qquad 
		=  \frac{ \sqrt{\pi} \, r}{ | \Gamma ( \frac{3}{4} + i \frac{\nu}{2} ) |^4 } 
		\frac{ (2\pi)^2}{  | \Gamma ( \frac{1}{4} + i \frac{\nu}{2} ) |^2 } \; 
		c_\nu
			\int_{S^1} r \, {\rm d} \psi \int_{S^1}  r \, {\rm d} \psi'  \; h (\psi) \,  P_{-\frac{1}{2} - i \nu} (- \cos (\psi - \psi') )  \; . 
\end{align*}
The last equality follows from shifting the integration in the $\psi'$ variable. 
\end{proof}

\begin{remark}
The integral over $\psi'$  can be computed using the 
formula 
	\[
		\lambda P_\lambda (x) =   x P_\lambda' (x) - P_{\lambda -1}' (x) \; , 
	\] 
which implies that 
\begin{align*}
	\lambda \int_0^1 {\rm d} x \; P_\lambda (x) & = \int_0^1 {\rm d} x \; x P_\lambda' (x) - P_{\lambda -1} (1) + P_{\lambda -1} (0)
	\\
	& = P_\lambda (1) - \int_0^1 {\rm d} x \; P_\lambda (x) - P_{\lambda -1} (1) + P_{\lambda -1} (0) \; . 
\end{align*}
Note that $P_\lambda (1) = P_{\lambda -1}(1) = 1$ and 
$
P_{\lambda}(0) = \frac{ \sqrt{\pi} } { \Gamma {(\frac{1}{2} - \frac{\lambda}{2})}  \Gamma {(\frac{\lambda}{2} +1)} } 
$. Thus 
	\[
		\int_{0}^1 {\rm d} u  \;  P_s(u) = \frac{ \sqrt{\pi} }{ (1+s) \Gamma ( 1- \frac{s}{2} ) \Gamma ( \frac{s}{2} + \frac{1}{2} ) } \; . 
	\]
\end{remark}

The function $ {\rm e}^{- V \left(\overline{S_+} \right)} $ is $\Sigma_{S_+}$-measurebale. We now 
construct the conditional expectation value.

\begin{theorem}[Conditional Expectation]
Let\footnote{Note that $\lim_{|y| \to \infty} | \Gamma(x+iy) | = \sqrt{2 \pi} \, | y |^{x- \frac{1}{2}} \, {\rm e}^{- \pi |y|  /2}$.}
	\[
		V (S^1) 
		\doteq \int_{S^1}  {\rm d} \psi \;   {:} {\mathscr P}( \kappa_0 \Phi(0,\psi)) {:}_{C_{0}}\; ,
		\qquad \kappa_0 = 
		\frac{ \sqrt{\pi} \, r}{ | \Gamma ( \frac{3}{4} + i \frac{\nu}{2} ) |^4 } 
		\frac{ (2\pi)^2}{  | \Gamma ( \frac{1}{4} + i \frac{\nu}{2} ) |^2 } 
		 \; . 
	\]
Here $1$ is the function, which is constant on $S^1$ and equal to $1$ at every point. It follows that
	\[ 
		{\mathcal V}  \bigl({\rm e}^{- V \left(\overline{S_+}\right)} \bigr) = 
		 {\rm e}^{-  V (S^1) }  
		\qquad \text{on $L^2 (Q, \Sigma_0 , {\rm d} \Phi_C)$}\; . 
	\]
\end{theorem}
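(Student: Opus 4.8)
The plan is to compute the conditional expectation $\mathcal{V}\bigl({\rm e}^{- V(\overline{S_+})}\bigr) = \mathcal{E}_{\Sigma_{S^1}}\bigl({\rm e}^{- V(\overline{S_+})}\bigr)$ directly, using the Markov property (Theorem~\ref{martheo}~$iii.)$) to identify $\mathcal{E}_{\Sigma_{\overline{S_-}}}$ with $\mathcal{E}_{\Sigma_{S^1}}$ on $\Sigma_{\overline{S_+}}$-measurable functions, and then to track how each Wick monomial in $V(\overline{S_+})$ collapses onto the equator $S^1$. The key computational input is already isolated in Lemma~\ref{kappa}: the ``spreading kernel'' that relates a delta-function on the equator to the two-point function $P_{s^+}$ contracts, under the integral over the polar angle $\theta'$ of the hemisphere, to the equatorial two-point function times the explicit constant $\kappa_0$. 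This is precisely the mechanism that turns a bulk interaction on $\overline{S_+}$ into a boundary interaction on $S^1$ with a renormalised coupling.

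\textbf{Key steps.} First I would reduce to exponentials of the Euclidean field: since ${\rm e}^{-V(\overline{S_+})} \in L^2({\mathcal Q}, \Sigma_{\overline{S_+}}, {\rm d}\Phi_C)$ by \eqref{intL1}, and the cylindrical functions are dense (by the lemma following \eqref{e1.6bb}), it suffices to compute $\mathcal{E}_{\Sigma_{S^1}}$ on Wick-ordered polynomials $\int_{\overline{S_+}} {\rm d}\Omega(x)\, f(x)\, {:}\Phi(x)^n{:}_C$ and then sum the exponential series (using that $\mathscr P$ is bounded below so all moments are controlled). Second, I would use the Fock-space realisation \eqref{dfock} together with the projection formula \eqref{pfock}: $\mathcal{E}_{\Sigma_{S^1}} = \Gamma(e(S^1))$, so conditioning amounts to replacing each one-particle vector $g \in \mathbb{H}^{-1}(S^2)$ supported in $\overline{S_+}$ by its orthogonal projection $e(S^1)g \in \mathbb{H}^{-1}_{\upharpoonright S^1}(S^2)$. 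Third, I would identify this projection explicitly. Using Lemma~\ref{detheo} and Lemma~\ref{dlemma}, $e(S^1)g$ is determined by its pairing against distributions of the form $\delta \otimes h$ on $S^1$, and Proposition~\ref{3.9a} combined with Lemma~\ref{kappa} shows that for $g$ a Wick factor built from $\delta$-functions approximating a point on $\overline{S_+}$, the relevant pairing picks up exactly the factor $\kappa_0$ when the point is pushed to the equator; the $\theta'$-integral kills all the higher Gegenbauer terms in the addition formula \eqref{crucial}, leaving only the $P_{s^+}(-\cos(\psi-\psi'))$ piece. Fourth, I would track the combinatorics: because Wick ordering with respect to $C$ coincides with Fock-vacuum Wick ordering, and because the conditional expectation acts as a second-quantised contraction, $\mathcal{E}_{\Sigma_{S^1}}$ sends ${:}\Phi(x)^n{:}_C$ with $x$ on the equator to ${:}\Phi(0,\psi)^n{:}_{C_0}$ — here the outgoing Wick ordering is with respect to the time-zero covariance $C_0$, exactly as needed — with each field insertion rescaled by $\kappa_0$, so $\int_{\overline{S_+}} {\rm d}\Omega\, {:}\mathscr P(\Phi(x)){:}_C$ maps to $\int_{S^1} {\rm d}\psi\, {:}\mathscr P(\kappa_0 \Phi(0,\psi)){:}_{C_0} = V(S^1)$. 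Finally, exponentiating and normalising gives the claimed identity on $L^2({\mathcal Q}, \Sigma_0, {\rm d}\Phi_C)$.

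\textbf{Main obstacle.} The delicate point will be the interchange of limits: $V(\overline{S_+})$ is defined as an $L^p$-limit of ultraviolet-regularised polynomials $P^{(n)}_m(f)$ (Theorem~\ref{uvtheo}), and ${\rm e}^{-V(\overline{S_+})}$ is an $L^1$ (indeed $L^2$) limit of ${\rm e}^{-P^{(n)}_m}$; one must check that $\mathcal{E}_{\Sigma_{S^1}}$, which is $L^2$-continuous (it is an orthogonal projection in the Fock picture), commutes with both the ultraviolet limit and the exponential-series limit, and that the boundary interaction $V(S^1)$ is itself a well-defined $\Sigma_0$-measurable function with ${\rm e}^{-V(S^1)} \in L^1$ — this last point needs $\kappa_0 > 0$ (visible from the $\Gamma$-function formula, since all arguments are in the right half-plane for the principal series, with the complementary-series case following by the analytic continuation already invoked in the proof of Lemma~\ref{coid}) so that $\mathscr P(\kappa_0\,\cdot)$ is still bounded below, whence Lemma~\ref{wickooo} and the $L^1$-bound of \cite[Lemma 3.15]{SHK} apply. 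A secondary subtlety is the exclusion of the coordinate-singular points $(0,\pm r,0)$ on the equator in the path-space chart, which is handled exactly as in the passage from \eqref{eqDeltaTensorh} to \eqref{eqDeltaTensorh'} and causes no loss since these form a null set.
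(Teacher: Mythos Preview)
Your proposal follows essentially the same route as the paper: pass to Fock space via \eqref{dfock}, realise $\mathcal{E}_{\Sigma_{S^1}}$ as $\Gamma(e(S^1))$, expand the exponential, and compute the projection of each $V(\overline{S_+})^k\Omega_E$ by pairing the $n$-particle wave-function against time-zero test vectors $(\delta\otimes h_1)\otimes_s\cdots\otimes_s(\delta\otimes h_n)$, with Lemma~\ref{kappa} and the addition formula \eqref{crucial} supplying the factor $\kappa_0$ per field insertion and the change of Wick ordering from $C$ to $C_0$. Your phrasing ``$x$ on the equator'' in step four is a slip (the field lives on $\overline{S_+}$ before projection), but the mechanism you describe in step three is exactly the paper's, and your discussion of the limit interchanges is in fact more careful than what the paper provides for the higher powers.
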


\begin{remark}
We note that $V(S^1)$ has to be distinguished from $V_0 (\chi_{S^1})$; 
the two expressions refer to different polynomials in $\Phi( 0, \psi)$ integrated over the time-zero circle $S^1$. 
\end{remark}

\begin{proof} By linearity, one may assume that ${\mathscr P}(\lambda)= \lambda^{n}$.
We can than expand the exponential function, and identifying $L^2( {\mathcal Q}, \Sigma, {\rm d} \Phi_C ) $ 
with $\Gamma( H_{{\mathbb C}}^{-1}(S^2))$
we may apply the projection  (see \eqref{pfock})
	\[
		E_{\Sigma_{S^1}} \doteq \Gamma \left(e({S^1}) \right) 
	\]
to the  vacuum vector $\Omega_E $ in~$\Gamma( H_{{\mathbb C}}^{-1} (S^2))$; \emph{i.e.}, 
	\begin{align*}
		{\mathcal V}  \bigl({\rm e}^{- V \left(\overline{S_+}\right)} \bigr) 
		& = E_{\Sigma_{S^1}} {\rm e}^{- V \left(\overline{S_+}\right)} \Omega_E \\
		& = \Omega_E  - \Gamma \left(e({S^1}) \right)  V \left(\overline{S_+}\right) \Omega_E 
		+ \Gamma \left(e({S^1}) \right) \bigl( V \left(\overline{S_+}\right) \bigr)^2 \Omega_E - \ldots 		
	\end{align*}
Let us first consider the term $\Gamma \left(e({S^1}) \right)  V \left(\overline{S_+}\right) \Omega_E $ in some more
detail. Let $\chi_{\overline{S_+}}$ denote the characteristic function of the closed upper hemisphere. 
We have seen in the proof of Theorem \ref{uvtheo} that 
	\[
		\lim_{m \to \infty} \int_0^{\pi} r \, {\rm d} \theta \int_{-\pi/2}^{\pi/2} r \cos \psi \, {\rm d} \psi \; 
				{:}\Phi \bigl(  \delta^{(2)}_{m}(\, .-\theta, \, .- \psi)   \bigr)^n {:}_{C}  
	\]
is a linear combination of Wick monomials of the form 
	\begin{align}
		& 		\sum_{j=0}^{n}  \begin{pmatrix} 
								n \\ 
								j
							\end{pmatrix}  \sum_{\ell_1 = 0}^\infty \sum_{k_1= - \ell_1}^{\ell_1} 
	 \cdots
	 \sum_{\ell_n = 0}^\infty \sum_{k_n= - \ell_n}^{\ell_n}\;  (-1)^{\sum_{i=1}^j k_i} 
		\nonumber \\
		& \qquad  \qquad    \qquad  \qquad
		\qquad  \times		 w^{(n)} (\ell_{1}, k_1, \ldots, \ell_{j}, k_{j}, 
		\ell_{j+1}, k_{j+1}, \ldots, \ell_{n}, k_{n}) 
		\nonumber \\
		& \qquad  \qquad  \qquad \qquad  \qquad  \qquad
		\qquad   \times		a^{*}_{\ell_1, k_1} \cdots a^{*}_{\ell_j, k_j}
				a_{\ell_{j+1}, - k_{j+1}} \cdots a_{\ell_{n}, - k_{n}} \; , 
	\label{sum-m-2}
	\end{align}
where $a^{(*)}_{\ell_i, k_i} \equiv a^{(*)} (Y_{\ell_i, k_i})$ and 
	\begin{align*}
		w^{(n)} (\ell_{1}, & k_1, \ldots,  \ell_{j}, k_{j}, 
		\ell_{j+1}, k_{j+1}, \ldots, \ell_{n}, k_{n}) \\
		& = \frac{1}{ 2^{\frac{n}{2}} r^{2n} } 
		\int_0^{2 \pi} r \, {\rm d} \theta \int_{-\pi/2}^{\pi/2} r \cos \psi \, {\rm d} \psi \;  \chi_{\overline{S_+}}(\theta,\psi) \; \prod_{i=1}^n
			 \overline{Y_{\ell_i, k_i} (\theta, \psi)}   \; .
	\end{align*}
When \eqref{sum-m-2} is applied to the free vacuum vector $\Omega_E$, only the term $j=n$ is non-zero. Moreover, 
we can undo the expansion into spherical harmonics: 
	\begin{align*}
		\lim_{m \to \infty} & \int_0^{\pi} r \, {\rm d} \theta \int_{-\pi/2}^{\pi/2} r \cos \psi \, {\rm d} \psi \; 
				{:}\Phi \bigl(  \delta^{(2)}_{m}(\, .-\theta, \, .- \psi)   \bigr)^n {:}_{C}  \Omega_E  \\
		& = \int_{S^2} {\rm d} \Omega (\vec x) \; \chi_{\overline{S_+}}(\vec x) \; 
		\underbrace{a^* ( \delta^2_{\vec x} ) \cdots a^* ( \delta^2_{\vec x} )}_{n-times}  \Omega_E \; . 
	\end{align*}
The resulting $n$-particle wave-function   
	\[
		f (\theta_1, \psi_1, \ldots, \theta_n, \psi_n) =  \chi_{\overline{S_+}}(\theta_1, \psi_1) \delta (\theta_1- \theta_2) \delta (\psi_1- \psi_2) 
		\cdots \delta (\theta_1- \theta_n) \delta (\psi_1- \psi_n)   
	\]
is in $\mathbb{H}^{-1} (S^2) \otimes_s \cdots \otimes_s \mathbb{H}^{-1} (S^2) $. 

In second order,  a term of the form \eqref{sum-m-2} is applied to the $n$-particle wave-function 
$f (\theta_1, \psi_1, \ldots, \theta_n, \psi_n)$. Now all terms in the sum over $j$ 
will contribute. A generic term is of the form
	\[
		\int_{S^2} {\rm d} \Omega (\vec y) \; \chi_{\overline{S_+}}(\vec y) \; 
		\underbrace{a^* ( \delta^2_{\vec y} ) \cdots a^* ( \delta^2_{\vec y} )}_{j-times} 
		\underbrace{a ( \delta^2_{\vec y} ) \cdots a ( \delta^2_{\vec y} )}_{(n-j)-times} 
		f (\theta_1, \psi_1, \ldots, \theta_n, \psi_n) \; . 
	\] 
The resulting wave-function is in $\Gamma^{(2j)}(\mathbb{H}^{-1}(S^2))$. 

Higher orders clearly have a very involved structure.
Nevertheless, we are able to apply the projection $e(S^1) \otimes_s  \cdots \otimes_s e(S^1)$. In first order,  
we can compute the conditional expectation by integrating against wave-functions of the form 
	\[
		\left( \delta ( \theta_1) \otimes h_1 (\psi_1) \right) \otimes_s \cdots \otimes_s
		\left( \delta ( \theta_n) \otimes h_n (\psi_n) \right) \; .  
	\]
This yields 
	\begin{align*}
		& \int_{S^2} {\rm d} \Omega(x_1) \int_{S^2} {\rm d} \Omega(y_1) \cdots 
		\int_{S^2} {\rm d} \Omega(x_n) \int_{S^2} {\rm d} \Omega(y_n) \; 
		 (\delta \otimes h_1) (x_1) \cdots (\delta \otimes h_n) (x_n) \, 
		\\
		& \qquad \quad \times 
		 P_{- \frac{1}{2} + i \nu} \bigl( - \tfrac{x_1 \cdot y_1}{r^2} \bigr) 
		\cdots P_{- \frac{1}{2} + i \nu} \bigl( - \tfrac{x_n \cdot y_1}{r^2} \bigr) \;  
		\chi_{\overline{S_+}}(y_1) \delta_{y_1} (y_2) \cdots \delta_{y_1} (y_n) \\
		& \int_{S^2} {\rm d} \Omega(y_1) \int_{S^2} {\rm d} \Omega(x_1) \cdots \int_{S^2} {\rm d} \Omega(x_n) \; 
		 (\delta \otimes h_1) (x_1) \cdots (\delta \otimes h_n) (x_n) \, 
		\\
		&  
		\qquad \quad \times P_{- \frac{1}{2} + i \nu} \bigl( - \tfrac{x_1 \cdot y_1}{r^2} \bigr) 
		\cdots P_{- \frac{1}{2} + i \nu} \bigl( - \tfrac{x_n \cdot y_1}{r^2} \bigr) \;  \chi_{\overline{S_+}}(y_1) \\
		& = \int_{S^1} r\, {\rm d} \psi_1 \int_{S^1} r \, {\rm d} \psi_1' \cdots \int_{S^1} r\, {\rm d} \psi_n \int_{S^1} r\, {\rm d} \psi_n' \; 
		h_1 (\psi_1) \cdots h_n (\psi_n) 
		\\
		& \qquad \quad \times  \kappa_0 c_\nu P_{- \frac{1}{2} + i \nu} \bigl( - \cos (\psi_1 - \psi_1') \bigr)
		\cdots  \kappa_0 c_\nu  \color{black} P_{- \frac{1}{2} + i \nu} \bigl( - \cos (\psi_n - \psi_n') \bigr) 
		\\
		& \qquad \quad \qquad
		\times \chi_{S^1}(\psi'_1)  \delta (\psi'_1- \psi'_2)  \cdots \delta (\psi'_1- \psi'_n) 
		 \; .  
	\end{align*}
Note that the characteristic function $ \chi_{S^1}(\psi'_1)  $ can be removed without alliterating the result. 
Moreover, note that symmetrisation is not required, as the expression is already symmetric. 
We have again used \eqref{crucial} and the fact that when integrating out the $\theta'$ variable, 
all the terms containing $ \cos k \theta'$ will not contribute. 

Thus
	\begin{align}
	\label{ce}
		& \bigl( \;  \underbrace{ e(S^1) \otimes_s  \cdots \otimes_s e(S^1) }_{n-times} 
		f \bigr) (\psi_1, \ldots, \psi_n  )   \nonumber \\
	& \qquad =  \kappa_0^n \chi_{S^1}(\psi_1)  \underbrace{ \delta (\psi_1- \psi_2)  \cdots \delta (\psi_1- \psi_n) }_{(n-1)-times}  \in 
		\underbrace{\mathbb{H}^{-1}_{\upharpoonright S^1} (S^2) \otimes_s  
		\cdots \otimes_s \mathbb{H}^{-1}_{\upharpoonright S^1} (S^2)}_{n-times} \; . 
	\end{align}
This shows that \eqref{ce} is equal to 
	\begin{align*}
		\int_{S^1} r\,  {\rm d} \psi \; 
				{:}\bigl( \kappa_0 \Phi \bigl(  0, \psi)   \bigr)^n {:}_{C_0}  \Omega_E  \; , 
	\end{align*}
where $\kappa_0$ is the constant first appearing in Lemma \ref{kappa}. Note that normal ordering is with respect to the 
time-zero covariance $C_0$. Similar arguments now show that 
	\begin{align*}
		\Gamma \left(e({S^1}) \right) \bigl( V \left(\overline{S_+}\right) \bigr)^n \Omega_E = 
		V (S^1)^n \Omega_E \;,	
	\end{align*}
proofing the statement. 
\color{black}
\end{proof}

\chapter{The Reconstruction of Interacting Quantum Fields}
\label{interactingdesitter}

\setcounter{equation}{0}

We start by introducing an auxiliary Hilbert space ${\mathcal H}_V$, associated to the interacting measure 
${\rm d} \mu_V$ on the sphere $S^2$. 

\section{The Hilbert space for the interacting measure}
\label{sec:6.1}

Replace the Gaussian measure ${\rm d} \Phi_C$ by the interacting measure ${\rm d} \mu_V$ 
first introduced in \eqref{pmeasure},
and repeat the Osterwalder-Schrader reconstruction: 

\begin{itemize}
\item[$i.)$] Since ${\rm d} \mu_V$ is absolutely continuous with respect to the Gaussian measure~${\rm d} \Phi_C$, reflection 
positivity extends to $L^2 ({\mathcal Q}, \Sigma_{\overline { S_+}}, {\rm d} \mu_V)$:
\label{interactingsphereLP}
	\[
		\int_{{\mathcal Q}}  {\rm d} \mu_V \; \overline{\Theta (F)} F  \ge 0 \; , \qquad 
		F \in L^2 ({\mathcal Q}, \Sigma_{\overline 		{ S_+}}, {\rm d} \mu_V) \; . 
	\]
Define
\label{inthilbertpage}
	\begin{equation}
		\label{phs}
 		{\mathcal H}_V 
		=\hbox{ completion of } L^{2}({\mathcal Q}, \Sigma_{\overline { S_+}}, {\rm d} \mu_V) /{\mathcal N}_V \; ,
	\end{equation}
where ${\mathcal N}_V \subset L^2 ({\mathcal Q}, \Sigma_{\overline { S_+}},  {\rm d}\mu_V)$ 
is the kernel of the positive quadratic form
	\begin{equation}
		\label{osspv}
		\langle F,G \rangle_V =\int_{{\mathcal Q}} {\rm d} \mu_V \;  \overline{ \Theta (F) }G \; , 
		\qquad F, G \in L^{2}({\mathcal Q}, \Sigma_{\overline{ S^+}}, {\rm d}\mu_V) \; ,
	\end{equation} 
and the completion in (\ref{phs}) is with respect to the scalar product induced by~(\ref{osspv}) 
on the quotient space
$L^{2}({\mathcal Q}, \Sigma_{\overline { S_+}}, {\rm d} \mu_V) /{\mathcal N}_V$. Let ${\mathcal V}_V$ denote 
the canonical map 
	\[ 
		{\mathcal V}_V \colon  L^{2}({\mathcal Q}, \Sigma_{\overline { S_+}}, {\rm d} \mu_V) 
		\to L^{2}({\mathcal Q}, \Sigma_{\overline { S_+}}, {\rm d} \mu_V)/{\mathcal N}_V
	\]
and let
	\begin{equation}
		\label{dpv}
		\Omega_V \doteq{\mathcal V}_V (1)
	\end{equation}
denote the distinguished unit vector in ${\mathcal H}_V$. 
\item[$ii.)$]
For $A\in L^{\infty}({\mathcal Q}, \Sigma_{S^1}, {\rm d} \mu_V)$ define
$A^{V} \in {\mathcal B}({\mathcal H}_V)$ by
	\begin{equation}
	\label{qqqq}
	A^{V}{\mathcal V}_V (F) \doteq {\mathcal V}_V (AF) \; , 
		\qquad F \in L^2 ({\mathcal Q}, \Sigma_{\overline { S_+}}, {\rm d}\mu_V) \; .
	\end{equation}
Denote by ${\mathcal U}_V (S^1) $ the abelian von Neumann algebra 
	\[
		{\mathcal U}_V (S^1) \doteq \left\{A^{V}  
		\in {\mathcal B}({\mathcal H}_V) \mid A\in L^{\infty}({\mathcal Q}, 
		\Sigma_{S^1}, {\rm d} \mu_V) \right\} \; . 
	 \]
The vector $ \Omega_V$ is cyclic for ${\mathcal U}_V (S^1)$ (see again Theorem 11.2 of \cite{KL1}).
\end{itemize}

\goodbreak

\begin{proposition}
\label{gspint}
$({\mathcal Q}, \Sigma, \Sigma^{(\alpha)},{\rm U}^{(\alpha)} ( \,.\, ), \Theta, {\rm d}\mu_{V})$ 
is a $(2 \pi)$-periodic, OS-positive  generalised path space in the sense of Definition \ref{gps}.
\end{proposition}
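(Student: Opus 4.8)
The plan is to reduce Proposition~\ref{gspint} to the corresponding statement for the Gaussian measure, namely Proposition~\ref{gspfree}, using the fact that $\mathrm{d}\mu_V$ is absolutely continuous with respect to $\mathrm{d}\Phi_C$ with a density that is invariant under the rotation group $\theta\mapsto {\rm U}^{(\alpha)}(\theta)$ and under the reflection $\Theta$. Concretely, one must verify the four defining properties of Definition~\ref{gps} for the data $({\mathcal Q},\Sigma,\Sigma^{(\alpha)},{\rm U}^{(\alpha)}(\,.\,),\Theta,\mathrm{d}\mu_V)$, as well as the $(2\pi)$-periodicity and OS-positivity from Definition~\ref{gps2}. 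The underlying measurable space $({\mathcal Q},\Sigma)$ and the sub-$\sigma$-algebra $\Sigma^{(\alpha)}$ are unchanged, so property~$ii.)$ of Definition~\ref{gps} is immediate, and the relation $\Sigma = \bigvee_{\theta\in S^1}{\rm U}^{(\alpha)}(\theta)\Sigma^{(\alpha)}$ already established in the proof of Proposition~\ref{gspfree} still holds.

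\textbf{Key steps.} First I would observe that $\mathrm{d}\mu_V = Z^{-1}\,{\rm e}^{-V(S^2)}\,\mathrm{d}\Phi_C$ with $Z = \int_{\mathcal Q}\mathrm{d}\Phi_C\,{\rm e}^{-V(S^2)}$, which is finite and strictly positive by \eqref{intL1}. Next I would show that the full interaction $V(S^2)$ is invariant under the rotations ${\rm U}^{(\alpha)}(\theta)$: by Lemma~\ref{2.2} (combined with the foliation identity of Lemma~\ref{1.0b}) the interaction can be written as an integral over the sphere of a Wick-ordered polynomial density, and since ${\rm U}^{(\alpha)}(\theta)$ is induced by the pull-back of a rotation $R^{(\alpha)}(\theta)_*$ that leaves both the surface measure $\mathrm{d}\Omega$ and Wick ordering with respect to the (rotation-invariant) covariance $C$ invariant, one gets ${\rm U}^{(\alpha)}(\theta)\,{\rm e}^{-V(S^2)} = {\rm e}^{-V(S^2)}$; similarly $\Theta\,{\rm e}^{-V(S^2)} = {\rm e}^{-V(S^2)}$ because the time reflection $T$ of \eqref{deftimerefl} is an isometry of $S^2$ preserving $\mathrm{d}\Omega$ and maps Wick monomials to Wick monomials. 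From these two invariances it follows that $\mathrm{d}\mu_V$ is ${\rm U}^{(\alpha)}(\theta)$-invariant and $\Theta$-invariant, so ${\rm U}^{(\alpha)}$ remains a one-parameter group of measure-preserving automorphisms of $L^\infty({\mathcal Q},\Sigma,\mathrm{d}\mu_V)$ (property~$iii.)$) and $\Theta$ remains a measure-preserving involution with $\Theta\,{\rm U}^{(\alpha)}(\theta) = {\rm U}^{(\alpha)}(-\theta)\,\Theta$ (property~$iv.)$, the commutation relation being purely group-theoretic and already known from Proposition~\ref{gspfree}). Strong continuity in measure of ${\rm U}^{(\alpha)}$ on $L^p(\mathrm{d}\mu_V)$ transfers from the Gaussian case by the absolute continuity together with the uniform integrability of the density. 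Periodicity ${\rm U}^{(\alpha)}(2\pi) = \mathbb{1}$ holds because $R^{(\alpha)}(2\pi) = \mathbb{1}$. For $\Theta E^{(\alpha)}_0 = E^{(\alpha)}_0\Theta$ one uses that $\Theta$ preserves $\Sigma^{(\alpha)}$ (the half-circle $I_\alpha$ is $T$-invariant) together with measure-invariance, so the conditional expectation w.r.t.\ $\Sigma^{(\alpha)}$ intertwines with $\Theta$. Finally, OS-positivity, i.e.\ $E^{[0,\pi]}\Theta E^{[0,\pi]}\geq 0$ on $L^2({\mathcal Q},\Sigma,\mathrm{d}\mu_V)$, is exactly reflection positivity for the interacting measure, which is item~$i.)$ in Section~\ref{sec:6.1} (inherited from Theorem~\ref{reftheo} via absolute continuity and the $\Theta$-invariance of the density, following \cite{D}\cite{KL1}).

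\textbf{Main obstacle.} The one genuinely non-formal point is the invariance $V(S^2)\circ {\rm U}^{(\alpha)}(\theta) = V(S^2)$ as an identity of $\Sigma$-measurable functions (equivalently of elements of $L^p$), since $V(S^2)$ is only defined through an ultraviolet limit of Wick-ordered approximants (Theorem~\ref{uvtheo}). I expect the cleanest route is to work with the regularized interactions $V_m(S^2)$ built from $\delta^{(2)}_m$, note that ${\rm U}^{(\alpha)}(\theta)$ permutes the spherical-harmonic cutoff symmetrically (the cutoff $\sum_{\ell\le m}$ is rotation-invariant) and commutes with the Wick ordering w.r.t.\ the rotation-invariant covariance $C$, whence ${\rm U}^{(\alpha)}(\theta)V_m(S^2) = V_m(\theta)$ with $V_m(\theta)$ converging to the same limit; then pass to the limit in $L^p$ using the continuity statement \eqref{e1.6bbb} and the uniform bounds from the proof of Theorem~\ref{uvtheo}. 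The analogous (and slightly easier) argument handles $\Theta$-invariance. Once this invariance of the density is in hand, all remaining verifications are routine transfers from Proposition~\ref{gspfree} and the already-proved reflection positivity of $\mathrm{d}\mu_V$.
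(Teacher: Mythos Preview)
The paper states this proposition without proof; it appears immediately after the reflection-positivity claim at the start of Section~\ref{sec:6.1}, and the argument is left implicit. Your outline is correct and supplies the standard proof: the essential content is that the density ${\rm e}^{-V(S^2)}$ is invariant under every rotation ${\rm U}^{(\alpha)}(\theta)$ (because the spherical-harmonics cutoff $\delta^{(2)}_m$ used in Theorem~\ref{uvtheo} is $SO(3)$-equivariant and the covariance $C$ is rotation-invariant) and under $\Theta$, so that axioms $i.)$--$iv.)$ of Definition~\ref{gps} and the $(2\pi)$-periodicity transfer directly from the Gaussian case (Proposition~\ref{gspfree}). Your identification of the rotation-invariance of $V(S^2)$ as the one point requiring care with the ultraviolet limit is accurate, and your proposed route through the regularized $V_m(S^2)$ is the right one.

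One refinement worth making explicit: for OS-positivity, $\Theta$-invariance of the full density ${\rm e}^{-V(S^2)}$ is not by itself sufficient. What is actually used in \cite{KL1} and \cite{GJ} is the additive splitting $V(S^2)=V(\overline{S_+})+V(\overline{S_-})$ together with $\Theta V(\overline{S_+})=V(\overline{S_-})$, which gives, for $F$ measurable with respect to $\Sigma_{\overline{S_+}}$,
\[
\int {\rm d}\mu_V\,\overline{\Theta(F)}\,F \;=\; Z^{-1}\int {\rm d}\Phi_C\,\overline{\Theta\bigl({\rm e}^{-V(\overline{S_+})}F\bigr)}\,\bigl({\rm e}^{-V(\overline{S_+})}F\bigr)\;\ge\;0
\]
by Gaussian reflection positivity (Theorem~\ref{reftheo}). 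The paper's own phrasing (``since ${\rm d}\mu_V$ is absolutely continuous\ldots'') is equally elliptical on this point, so your deferral to the cited literature is in keeping with the paper's style.
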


For $ 0\leq \theta\leq \pi$ fixed, set  
	\[
	 {\mathcal M}^{(\alpha)}_{\theta, V}\doteq L^{2} \Bigl({\mathcal Q}, \bigvee_{\theta'' \in [0, \pi -\theta]}
	 {\rm U}^{(\alpha)} (\theta'' )\Sigma^{(\alpha)} , {\rm d} \mu_V \Bigr)\; , \quad 0\leq \theta\leq \pi \; .
	 \]

\begin{definition}
Let ${\mathscr D}_{\theta, V}^{(\alpha)}\doteq{\mathcal V}_V \bigl({\mathcal M}^{(\alpha)}_{\theta, V}\bigr)$ and define, 
for $0\leq \theta'\leq \theta$,
	\begin{align*}
		P^{(\alpha)}_V(\theta') \colon {\mathscr D}_{\theta, V}^{(\alpha)} &\to  {\mathcal H}_V
		\nonumber \\
		{\mathcal V}_V (F)
			&\mapsto  {\mathcal V}_V \bigl( {\rm U}^{(\alpha)}(\theta')F) \; , 
			\quad F \in {\mathcal M}^{(\alpha)}_{\theta, V} \; .
	\end{align*}
\end{definition}

\begin{proposition} 
\label{3.20}
$\bigl(P^{(\alpha)}_V (\theta) , {\mathscr D}^{(\alpha)}_{\theta, V} \bigr)$ is a 
local symmetric semigroup on $  {\mathcal H}_V$. 
Its generator  $L_V^{(\alpha)}$  satisfies
\label{intlssgpage}
	\[
	P^{(\alpha)}_V(\theta')\Psi= {\rm e}^{-\theta' L^{(\alpha)}_V}\Psi \; ,   
	\qquad \Psi \in {\mathscr D}^{(\alpha)}_{\theta, V} \; , \quad 0\leq \theta'\leq \theta \; . 
	\]
\end{proposition}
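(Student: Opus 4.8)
The statement is the interacting analogue of Proposition~\ref{3.19}, and the natural strategy is to follow that proof line by line, checking that every ingredient survives when the Gaussian measure ${\rm d}\Phi_C$ is replaced by the interacting measure ${\rm d}\mu_V$. Concretely, one verifies the five axioms of Definition~\ref{lsssg} for the pair $\bigl(P^{(\alpha)}_V(\theta), {\mathscr D}^{(\alpha)}_{\theta,V}\bigr)$, and then invokes the Fr\"ohlich--Klein--Landau theorem (Theorem~\ref{K-L--F}) to obtain the self-adjoint generator $L^{(\alpha)}_V$ together with the semigroup formula ${\rm e}^{-\theta' L^{(\alpha)}_V}$. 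The key structural inputs are already in place: Proposition~\ref{gspint} tells us that $({\mathcal Q},\Sigma,\Sigma^{(\alpha)},{\rm U}^{(\alpha)}(\cdot),\Theta,{\rm d}\mu_V)$ is a $(2\pi)$-periodic, OS-positive generalised path space, and OS-positivity is precisely what makes the scalar product $\langle\cdot,\cdot\rangle_V$ in \eqref{osspv} non-negative, hence what makes ${\mathcal H}_V$ and the map ${\mathcal V}_V$ well-defined.

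\textbf{Key steps, in order.} First I would check axiom~(i): the spaces ${\mathscr D}^{(\alpha)}_{\theta,V}={\mathcal V}_V\bigl({\mathcal M}^{(\alpha)}_{\theta,V}\bigr)$ are linear and nested (monotonicity of the $\sigma$-algebras $\bigvee_{\theta''\in[0,\pi-\theta]}{\rm U}^{(\alpha)}(\theta'')\Sigma^{(\alpha)}$ in $\theta$ is immediate), and their union is dense in ${\mathcal H}_V$ because $\bigcup_{0<\theta\le\pi}{\mathcal M}^{(\alpha)}_{\theta,V}$ contains $L^2({\mathcal Q},\Sigma^{(\alpha)},{\rm d}\mu_V)$, whose image under ${\mathcal V}_V$ generates ${\mathcal H}_V$ by the cyclicity of $\Omega_V$ for ${\mathcal U}_V(S^1)$ noted just after \eqref{qqqq}. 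Second, axiom~(ii): for $0\le\theta'\le\theta$ the automorphism ${\rm U}^{(\alpha)}(\theta')$ maps ${\mathcal M}^{(\alpha)}_{\theta,V}$ into ${\mathcal M}^{(\alpha)}_{\theta-\theta',V}$ by the group law for ${\rm U}^{(\alpha)}$, so $P^{(\alpha)}_V(\theta')$ is well-defined on ${\mathscr D}^{(\alpha)}_{\theta,V}$ with range in ${\mathscr D}^{(\alpha)}_{\theta-\theta',V}$; here one must also check that $P^{(\alpha)}_V(\theta')$ descends to the quotient, i.e.\ that ${\rm U}^{(\alpha)}(\theta'){\mathcal N}_V\cap{\mathcal M}^{(\alpha)}_{\theta-\theta',V}\subset{\mathcal N}_V$, which follows from the intertwining relation ${\rm U}^{(\alpha)}(\theta)\Theta=\Theta{\rm U}^{(\alpha)}(-\theta)$ and the fact that ${\rm U}^{(\alpha)}$ preserves ${\rm d}\mu_V$. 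Third, the semigroup property (axiom (iii)) is a direct consequence of ${\rm U}^{(\alpha)}(\theta)\,{\rm U}^{(\alpha)}(\theta')={\rm U}^{(\alpha)}(\theta+\theta')$ together with $P^{(\alpha)}_V(0)={\rm id}$. Fourth, symmetry (axiom~(iv)): for $u={\mathcal V}_V(F)$, $v={\mathcal V}_V(G)$ with $F,G\in{\mathcal M}^{(\alpha)}_{\theta,V}$ one computes $\langle u,P^{(\alpha)}_V(\theta)v\rangle_V=\int{\rm d}\mu_V\,\overline{\Theta F}\,{\rm U}^{(\alpha)}(\theta)G$ and uses the measure-invariance of ${\rm U}^{(\alpha)}$ and the relation $\Theta{\rm U}^{(\alpha)}(\theta)={\rm U}^{(\alpha)}(-\theta)\Theta$ to move the half-rotation symmetrically onto both sides, exactly as in the Gaussian case. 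Fifth, weak continuity (axiom~(v)) follows from strong continuity in measure of $\theta\mapsto{\rm U}^{(\alpha)}(\theta)$ on $L^p({\mathcal Q},\Sigma,{\rm d}\mu_V)$ (part of Definition~\ref{gps}, reaffirmed for ${\rm d}\mu_V$ in Proposition~\ref{gspint}) combined with the contractivity of ${\mathcal V}_V$; one shows $\theta\mapsto\langle u,P^{(\alpha)}_V(\theta)u\rangle_V$ is continuous on $(0,\theta')$ for $u\in{\mathscr D}^{(\alpha)}_{\theta',V}$. Having verified (i)--(v), Theorem~\ref{K-L--F} produces a unique self-adjoint $L^{(\alpha)}_V$ with $P^{(\alpha)}_V(\theta')\Psi={\rm e}^{-\theta' L^{(\alpha)}_V}\Psi$ for $\Psi\in{\mathscr D}^{(\alpha)}_{\theta,V}$, $0\le\theta'\le\theta$.

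\textbf{Main obstacle.} The delicate point, and the only place where the interacting case genuinely differs from Proposition~\ref{3.19}, is that ${\rm e}^{-V(\overline{S_+})}\in L^2$ but is \emph{not} bounded, so one must be careful that the class ${\mathcal M}^{(\alpha)}_{\theta,V}=L^2\bigl({\mathcal Q},\bigvee_{\theta''\in[0,\pi-\theta]}{\rm U}^{(\alpha)}(\theta'')\Sigma^{(\alpha)},{\rm d}\mu_V\bigr)$ is large enough for $\bigcup_\theta{\mathscr D}^{(\alpha)}_{\theta,V}$ to be dense: density ultimately rests on the hypercontractivity estimate \eqref{intL1} (so that $L^2({\rm d}\mu_V)$-integrability is not destroyed under the conditional expectations and the rotation flow) and on the Markov property for the half-circles from Proposition~\ref{gspint}. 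I expect the bulk of the work to be the careful bookkeeping that ${\rm U}^{(\alpha)}(\theta')$ really does send ${\mathcal M}^{(\alpha)}_{\theta,V}$ into ${\mathcal M}^{(\alpha)}_{\theta-\theta',V}$ \emph{as a map of $L^2$-classes with respect to the non-Gaussian measure}, and that it descends to the quotient by ${\mathcal N}_V$; everything else is a verbatim transcription of the Gaussian argument. Since ${\rm d}\mu_V$ is absolutely continuous with respect to ${\rm d}\Phi_C$ with a density that is rotation-covariant only through the interaction term, one should double-check that the OS-positivity used in Proposition~\ref{gspint} is applied for the \emph{correct} reflection $\Theta$ in the decomposition $S^2=\overline{S_-}\cup\overline{S_+}$, which is exactly the content of Theorem~\ref{reftheo}~$ii.)$ transported to ${\rm d}\mu_V$.
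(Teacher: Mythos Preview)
Your approach is correct and is precisely what the paper has in mind: it states Proposition~\ref{3.20} without proof, relying on the reader to transcribe the one-line proof of Proposition~\ref{3.19} (``verify conditions $i.)$--$v.)$ of Definition~\ref{lsssg}'') with ${\rm d}\Phi_C$ replaced by ${\rm d}\mu_V$, using Proposition~\ref{gspint} in place of Proposition~\ref{gspfree} as the structural input.

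One remark on emphasis. Your ``main obstacle'' paragraph overstates the difficulty. The unboundedness of ${\rm e}^{-V(\overline{S_+})}$ plays no role in this proposition: the construction of $P^{(\alpha)}_V$ and ${\mathscr D}^{(\alpha)}_{\theta,V}$ takes place entirely within the abstract generalised-path-space framework for $({\mathcal Q},\Sigma,\Sigma^{(\alpha)},{\rm U}^{(\alpha)},\Theta,{\rm d}\mu_V)$, and Proposition~\ref{gspint} has already established all the needed properties (OS-positivity, periodicity, measure-preservation of ${\rm U}^{(\alpha)}$, the intertwining $\Theta\,{\rm U}^{(\alpha)}(\theta)={\rm U}^{(\alpha)}(-\theta)\,\Theta$). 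Once that proposition is in hand, the verification of $i.)$--$v.)$ is literally identical to the Gaussian case, since the Klein--Landau machinery \cite{KL1,KL2} is formulated for an arbitrary OS-positive path space. The interaction-specific analysis (hypercontractivity, ${\rm e}^{-V}\in L^1$, etc.) was spent in establishing Proposition~\ref{gspint} and is not needed again here.

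A small correction to your density argument in axiom~$(i)$: you invoke cyclicity of $\Omega_V$ for ${\mathcal U}_V(S^1)$, but $\Sigma^{(\alpha)}$ corresponds to the half-circle $I_\alpha$, not $S^1$. The cleaner route is to note that as $\theta\downarrow 0$ the $\sigma$-algebra $\bigvee_{\theta''\in[0,\pi-\theta]}{\rm U}^{(\alpha)}(\theta'')\Sigma^{(\alpha)}$ increases to $\Sigma_{\overline{S_+}}$, so any $F\in L^2({\mathcal Q},\Sigma_{\overline{S_+}},{\rm d}\mu_V)$ is an $L^2$-limit of elements of $\bigcup_{\theta>0}{\mathcal M}^{(\alpha)}_{\theta,V}$; continuity of ${\mathcal V}_V$ then gives density of $\bigcup_{\theta>0}{\mathscr D}^{(\alpha)}_{\theta,V}$ in ${\mathcal H}_V$ directly from the definition~\eqref{phs}.
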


Define $J^{(\alpha)}_{V}$ as the unique extension of the anti-linear operator  
\label{TTintpage}
	\begin{equation}  
	\label{wichtig3}
	J^{(\alpha)}_{V} {\mathcal V}_V (F) \doteq
	{\mathcal V}_V \bigl(\Theta^{(\alpha)}_{\pi/2} \overline{F} \bigr) \; , 
	\qquad F \in L^2( {\mathcal Q}, \Sigma_{\overline{S_+}}, {\rm d} \mu_V) \; ,
	\end{equation}
where $\Theta^{(\alpha)}_{\pi/2}$ was introduced in (\ref{thetapi}). 
Furthermore, define the von Neumann algebra
	\begin{equation}
		{\mathcal R}_V   (I_\alpha ) \doteq \bigvee_{t \in {\mathbb R}} 
		\left( {\rm e}^{-it  L_V^{(\alpha)} } {\mathcal U}_{V} (I_\alpha)  {\rm e}^{it  L_V^{(\alpha)} }\right) \; ,
	\end{equation}
where $ {\mathcal U}_V (I_\alpha ) \doteq \left\{A^{V} \in {\mathcal B}({\mathcal H}_V) \mid 
A\in L^{\infty}({\mathcal Q}, \Sigma_{I_\alpha},
{\rm d} \mu_V) \right\} $.  

\begin{theorem} 
\label{ToTa11}
The positive operator 
${\rm e}^{- \pi  L^{(\alpha)}_V} $ is the Tomita-Takesaki modular operator for the pair
$\bigl({\mathcal R}_V  (I_\alpha ), \Omega_V \bigr)$,  and
$J^{(\alpha)}_{V}$ is the modular conjugation for the pair 
$\bigl({\mathcal R}_V  (I_\alpha ), \Omega_V \bigr)$, \emph{i.e.}, 
	\[
	 J^{(\alpha)}_V {\rm e}^{- \pi  L^{(\alpha)}_V} A \Omega_V = A^* \Omega_V \; ,
	\qquad  A \in {\mathcal R}_V (I_\alpha ) \; .
	\]
\end{theorem}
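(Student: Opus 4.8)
The statement is the interacting analogue of Theorem~\ref{th3.10}, and the natural strategy is to run the Klein--Landau argument (their Theorem~12.1 in \cite{KL1}) in the present setting, with $({\mathcal Q},\Sigma,\Sigma^{(\alpha)},{\rm U}^{(\alpha)}(\,.\,),\Theta,{\rm d}\mu_V)$ playing the role of the generalised path space. By Proposition~\ref{gspint} this is a $(2\pi)$-periodic, OS-positive generalised path space, so all the abstract machinery of local symmetric semigroups applies verbatim. First I would recall that $\Omega_V$ is cyclic for ${\mathcal U}_V(S^1)$ (stated after \eqref{qqqq}), and that $L_V^{(\alpha)}$ is the generator of the local symmetric semigroup $\bigl(P_V^{(\alpha)}(\theta),{\mathscr D}^{(\alpha)}_{\theta,V}\bigr)$ furnished by Proposition~\ref{3.20}, via Theorem~\ref{K-L--F}. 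The cyclicity of $\Omega_V$ for ${\mathcal R}_V(I_\alpha)$ then follows by the same analyticity argument used for the free field just before Theorem~\ref{th3.10}: if $\langle\Psi,A\,{\rm e}^{it L_V^{(\alpha)}}B\Omega_V\rangle_V=0$ for all $t$ and all $A,B\in{\mathcal U}_V(I_\alpha)$, analytic continuation into the strip and the identity ${\rm e}^{-\pi L_V^{(\alpha)}}{\mathcal U}_V(I_\alpha){\rm e}^{\pi L_V^{(\alpha)}}\subset{\mathcal U}_V(I_{\alpha+\pi})$ together with ${\mathcal U}_V(I_\alpha)\vee{\mathcal U}_V(I_{\alpha+\pi})={\mathcal U}_V(S^1)$ force $\Psi=0$; the separating property of $\Omega_V$ for ${\mathcal R}_V(I_\alpha)$ follows by applying the same reasoning to the commutant ${\mathcal R}_V(I_\alpha)'={\mathcal R}_V(I_{\alpha+\pi})$, which holds because $L_V^{(\alpha)}$ does not mix $I_\alpha$ and $I_{\alpha+\pi}$.

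The core of the argument is then to verify the KMS relation $J^{(\alpha)}_V\,{\rm e}^{-\pi L_V^{(\alpha)}}A\Omega_V=A^*\Omega_V$ for $A\in{\mathcal R}_V(I_\alpha)$, and that $J^{(\alpha)}_V$ is an anti-unitary involution with $J^{(\alpha)}_V{\mathcal R}_V(I_\alpha)J^{(\alpha)}_V={\mathcal R}_V(I_\alpha)'$. By density it suffices to check this on a total set of vectors. Using Proposition~\ref{3.20} and the multi-analyticity of the correlation functions (the interacting analogue of Proposition~\ref{boostanaliticity}, which again is a consequence of the reconstruction theorem and Araki's multi-analyticity \cite{A2} together with \eqref{intL1}), vectors of the form
\[
{\mathcal V}_V\bigl({\rm U}^{(\alpha)}(\theta_n)A_n\,{\rm U}^{(\alpha)}(\theta_{n-1})A_{n-1}\cdots{\rm U}^{(\alpha)}(\theta_1)A_1\bigr)
\]
with $A_i\in L^\infty({\mathcal Q},\Sigma_{I_\alpha},{\rm d}\mu_V)$, $\theta_i\ge 0$, $\sum\theta_i\le\pi$, are analytic vectors for ${\rm e}^{-\theta L_V^{(\alpha)}}$ and span a dense subspace of ${\mathcal H}_V$. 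On such vectors one computes both sides explicitly: the action of ${\rm e}^{-\pi L_V^{(\alpha)}}$ reorders and reflects the Euclidean ``times'' $\theta_i$ to $\pi-\theta_i$, the complex conjugation and the reflection $\Theta^{(\alpha)}_{\pi/2}={\rm U}^{(\alpha)}(\pi/2)\Theta{\rm U}^{(\alpha)}(-\pi/2)$ built into $J^{(\alpha)}_V$ in \eqref{wichtig3} implement precisely the reversal of the operator order together with the $*$-operation, and the $(2\pi)$-periodicity closes the circle. The identity $\Theta^{(\alpha)}_{\pi/2}{\rm U}^{(\alpha)}(\theta)=\,{\rm U}^{(\alpha)}(-\theta)\Theta^{(\alpha)}_{\pi/2}$, which follows from ${\rm U}^{(\alpha)}(\theta)\Theta=\Theta{\rm U}^{(\alpha)}(-\theta)$, is the algebraic heart of the reflection step. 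Anti-unitarity of $J^{(\alpha)}_V$ on ${\mathcal H}_V$ follows from reflection positivity of ${\rm d}\mu_V$ (item~$i.)$ of Section~\ref{sec:6.1}) combined with $\mu_V$-invariance of $\Theta^{(\alpha)}_{\pi/2}$, exactly as in the free case computation preceding Definition~\ref{def4}; that $J^{(\alpha)}_V$ is an involution is immediate from $(\Theta^{(\alpha)}_{\pi/2})^2=\mathbb{1}$ and $\overline{\overline F}=F$.

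Once the modular relation is established on the total set, the general statement follows from Tomita--Takesaki theory: the uniqueness of the modular operator and conjugation for $({\mathcal R}_V(I_\alpha),\Omega_V)$ (cyclic and separating) forces ${\rm e}^{-\pi L_V^{(\alpha)}}$ to coincide with the modular operator and $J^{(\alpha)}_V$ with the modular conjugation; in particular $J^{(\alpha)}_V{\mathcal R}_V(I_\alpha)J^{(\alpha)}_V={\mathcal R}_V(I_\alpha)'$ and the modular flow ${\rm e}^{it L_V^{(\alpha)}}$ acts as automorphisms of ${\mathcal R}_V(I_\alpha)$. The main obstacle I anticipate is not conceptual but bookkeeping: one must ensure that the interacting sharp-time fields and their products lie in $\bigcap_{p}L^p({\mathcal Q},\Sigma,{\rm d}\mu_V)$ so that the vectors above are genuinely well-defined and analytic --- this requires controlling the interaction density against the Gaussian covariance exactly as in the free-field proofs of Proposition~\ref{3.10} and Theorem~\ref{uvtheo}, using $\mu_V$-absolute continuity and \eqref{intL1} --- and that the closability and essential self-adjointness implicit in ``the unique extension'' in \eqref{wichtig3} and in Proposition~\ref{3.20} hold. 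Since all of these are either quoted verbatim from \cite{KL1,KL2} (whose hypotheses we have verified via Proposition~\ref{gspint}) or are the interacting copies of statements already proved for the free field, the proof reduces to checking that each ingredient transfers, and I would present it as: (1) $\Omega_V$ cyclic and separating for ${\mathcal R}_V(I_\alpha)$; (2) the explicit computation on the total set of $\mu_V$-analytic vectors; (3) invoke uniqueness in Tomita--Takesaki theory. This completes the proof.
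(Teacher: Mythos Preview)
Your proposal is correct and follows the same approach the paper has in mind: the paper gives no explicit proof of Theorem~\ref{ToTa11} at all, treating it as a direct application of Klein--Landau \cite[Theorem~12.1]{KL1} to the interacting $(2\pi)$-periodic OS-positive generalised path space of Proposition~\ref{gspint}, exactly parallel to the free-field Theorem~\ref{th3.10}. Your write-up is in fact more detailed than the paper's, spelling out the cyclic/separating argument, the role of the reflection $\Theta^{(\alpha)}_{\pi/2}$, and the analytic-vector computation that constitute the content of the Klein--Landau result.
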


The rotations can be implemented easily; however, we will have to show later on that the boosts and the rotations 
\emph{together} generate a representation of $SO_0 (1,2)$. 

\begin{proposition} 
\label{knull-V-prop}
Consider the map (\ref{qqqq}). It follows that
\label{knull-V-proppage}	
	\begin{equation}
		\label{knullos}
		{\rm e}^{i \alpha K_0^{V}} A^{V} \Omega_{V} 
		\doteq {\mathcal V}_V \bigl( {\rm U} (R_{0} (\alpha)) A \bigr) \; ,
		\qquad A \in L^{\infty}({\mathcal Q}, \Sigma_{S^1}, {\rm d} \mu_V) \;  , 
	\end{equation}
extends to a strongly continuous unitary representation of the rotation group $SO(2)$ 
on the Hilbert space ${\mathcal H}_{V}$. 
\end{proposition}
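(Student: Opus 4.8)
The statement asserts that the prescription \eqref{knullos} extends from the dense domain $\{A^V\Omega_V \mid A \in L^\infty({\mathcal Q}, \Sigma_{S^1}, {\rm d}\mu_V)\}$ to a strongly continuous unitary representation of $SO(2)$ on ${\mathcal H}_V$. The plan is to mimic the proof of Proposition \ref{knullosprop} (the free case), the only new ingredient being that the interacting measure ${\rm d}\mu_V$ must be invariant under rotations around the $x_0$-axis. First I would record the three facts that make the argument work: (i) the rotation group $\{R_0(\alpha) \mid \alpha \in [0, 2\pi)\}$ is contained in $SO(3)$, the symmetry group of the Euclidean sphere $S^2$, and hence its pull-back action ${\rm U}(R_0(\alpha)) = {\rm U}^{(0)}(\alpha)$ (see \eqref{wawa}) preserves the Gaussian measure ${\rm d}\Phi_C$, because $C$ in \eqref{c} is built from $-\Delta_{S^2} + \mu^2$, which commutes with $SO(3)$; (ii) the interaction $V(S^2) = \int_{S^2}{\rm d}\Omega\; {:}{\mathscr P}(\Phi(\vec x)){:}_C$ is manifestly invariant under any rotation $g \in SO(3)$, since ${\rm d}\Omega$ is the rotation-invariant surface element and Wick ordering with respect to $C$ commutes with the (measure-preserving, hence Wick-order-preserving) pull-back; therefore ${\rm U}^{(0)}(\alpha)$ leaves ${\rm d}\mu_V$ in \eqref{pmeasure} invariant; (iii) the rotations $R_0(\alpha)$ leave the equator $S^1$, the closed upper hemisphere $\overline{S_+}$, and the $\sigma$-algebras $\Sigma_{S^1}$, $\Sigma_{\overline{S_+}}$ invariant, and they commute with the time-reflection $T$ (since $T$ only flips $x_0$ and $R_0(\alpha)$ acts on $(x_1, x_2)$).

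With these facts in hand the construction is routine. Since ${\rm U}^{(0)}(\alpha)$ preserves ${\rm d}\mu_V$ and $\Sigma_{\overline{S_+}}$, it restricts to an isometry of $L^2({\mathcal Q}, \Sigma_{\overline{S_+}}, {\rm d}\mu_V)$; since it commutes with $\Theta$ (fact (iii)) and with the conditional expectation $E_{\Sigma_{S^1}}$ (because $\Theta$ and $E_{\Sigma_{S^1}}$ commute and rotations preserve $\Sigma_{S^1}$), it maps the kernel ${\mathcal N}_V$ of the form \eqref{osspv} into itself. Hence ${\rm U}^{(0)}(\alpha)$ descends to an isometry $\widetilde{\rm U}(\alpha)$ of the quotient, which extends to a unitary on ${\mathcal H}_V$; one checks $\widetilde{\rm U}(\alpha)\widetilde{\rm U}(\alpha') = \widetilde{\rm U}(\alpha+\alpha')$ and $\widetilde{\rm U}(2\pi) = \mathbb{1}$ from the group law of ${\rm U}^{(0)}$ on $L^\infty$, and $\widetilde{\rm U}(\alpha)\Omega_V = \Omega_V$ from ${\rm U}^{(0)}(\alpha)1 = 1$. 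Defining $K_0^V$ as the self-adjoint generator of $\alpha \mapsto \widetilde{\rm U}(\alpha)$, one gets precisely \eqref{knullos}: for $A \in L^\infty({\mathcal Q}, \Sigma_{S^1}, {\rm d}\mu_V)$ one has ${\rm U}^{(0)}(\alpha)(A) = ({\rm U}(R_0(\alpha))A)$ is again $\Sigma_{S^1}$-measurable and $\widetilde{\rm U}(\alpha)\, A^V\Omega_V = \widetilde{\rm U}(\alpha){\mathcal V}_V(A) = {\mathcal V}_V({\rm U}^{(0)}(\alpha)A) = {\mathcal V}_V({\rm U}(R_0(\alpha))A)$. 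Density of $\{A^V\Omega_V\}$ in ${\mathcal H}_V$ (which follows from the cyclicity of $\Omega_V$ for ${\mathcal U}_V(S^1)$, quoted right after \eqref{dpv} from Klein--Landau) then shows the extension is unique, and strong continuity on the dense domain plus uniform boundedness gives strong continuity everywhere.

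The main obstacle, and the only place where anything must really be checked, is the rotation invariance of the interacting measure, i.e.\ step (ii): one needs ${\rm U}^{(0)}(\alpha) V(S^2) = V(S^2)$ as elements of $\bigcap_{p<\infty} L^p({\mathcal Q}, \Sigma, {\rm d}\Phi_C)$. Since $V(S^2)$ is defined by the ultraviolet limit in Theorem \ref{uvtheo}, I would argue by approximation: ${\rm U}^{(0)}(\alpha)$ is a measure-preserving automorphism, so it commutes with Wick ordering ${:}\,\cdot\,{:}_C$ and is continuous on each $L^p$; applying it to the approximants $\int_{S^2}{\rm d}\Omega\, {:}\Phi(\delta_m^{(2)}(\,\cdot - g^{-1}\theta, \,\cdot - g^{-1}\psi))^n{:}_C$ and using that ${\rm d}\Omega$ is $SO(3)$-invariant and that $\delta_m^{(2)}$ transforms covariantly (its construction from spherical harmonics, \eqref{sum-m-1}, is $SO(3)$-covariant) one sees the approximants are invariant, and the invariance passes to the limit. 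Everything else is bookkeeping identical to the free case, Proposition \ref{knullosprop}. Once rotation invariance of ${\rm d}\mu_V$ is established, no domain subtleties arise because we are dealing with a genuine unitary group of $L^2$, not a local semigroup.
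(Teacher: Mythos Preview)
Your proof is correct and, in fact, considerably more detailed than what the paper provides: the paper states Proposition~\ref{knull-V-prop} without proof, relying implicitly on the analogous free-case argument (Proposition~\ref{knullosprop}), whose proof is just two lines (rotation invariance of ${\mathcal U}(S^1)$ and $\Omega$, plus density via Lemma~\ref{ostheo}). You have correctly identified that the only new ingredient needed for the interacting case is the rotation invariance of ${\rm d}\mu_V$, and you verify this carefully via the $SO(3)$-invariance of both the covariance $C$ and the interaction $V(S^2)$; once that is in hand, your descent argument through the quotient ${\mathcal H}_V$ is exactly the right way to fill in what the paper leaves tacit.
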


\section{Virtual representations}
\label{virtual}
The basic object in the theory of \emph{virtual representations}\index{virtual representation}\footnote{See \cite{JO} for
recent work and an extensive list of references on this topic.} \cite{FOS} is a symmetric space\index{symmetric space}:  
let~$G$ be a Lie group\index{Lie group}, 
$K$ a closed subgroup  of~$G$, with Lie algebras\index{Lie algebra} ${\mathfrak g}$ 
and ${\mathfrak k}$, respectively.
A Lie algebra~${\mathfrak g}$ is  {\em symmetric} (giving rise to a symmetric space 
$(G, K , {\mathfrak T})$), if it  allows a decomposition
\label{virtreppage}
	\begin{equation}
		\label{VR1}
		{\mathfrak g} = {\mathfrak k} \oplus {\mathfrak m}
	\end{equation}
(where $\oplus$ indicates a direct sum of vector spaces) such that
	\begin{equation}
		\label{VR2}
		[{\mathfrak k}, {\mathfrak k}] \subset {\mathfrak k} \; , \qquad [{\mathfrak k}, {\mathfrak m}] 
			\subset {\mathfrak m}\; , \qquad [{\mathfrak m}, {\mathfrak m}] 
			= {\mathfrak k} \; .
	\end{equation}
On a symmetric Lie algebra there exists (use  \eqref{VR1}--\eqref{VR2} to derive this fact)  a natural 
involutive automorphism ${\mathfrak T}$ of ${\mathfrak g}$, such that
	\[
		{\mathfrak T}_{| {\mathfrak k} }= \mathbb{1} \; , 
		\qquad {\mathfrak T}_{|{\mathfrak m}} = - \mathbb{1} \; . 
	\]
The {\em dual symmetric Lie algebra} ${\mathfrak g}^*$ is (see, \emph{e.g.}, \cite{KN})
	\begin{equation}
		\label{VR3}
		{\mathfrak g}^* = {\mathfrak k} \oplus i{\mathfrak m}.
	\end{equation}
\eqref{VR2} implies that  ${\mathfrak g}^*$ is  the {\em real} Lie algebra of a  
simple connected Lie group~$G^*$. 

\begin{lemma} $so(3)$ is a symmetric Lie algebra with sub-Lie algebra $SO(2)$; 
the dual symmetric Lie algebra is $so(1,2)$. 
\end{lemma}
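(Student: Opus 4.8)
The plan is to verify directly that $\mathfrak{so}(3)$ satisfies the three bracket relations in \eqref{VR2} with respect to the rotation subalgebra generated by $K_0$, and then to identify its dual. First I would fix the basis $\{K_0, K_1, K_2\}$ of $\mathfrak{so}(3)$ used earlier in Section~\ref{SO3}, set $\mathfrak{k} = \mathbb{R} K_0$ (the Lie algebra of the subgroup $SO(2)$ of rotations $R_0(\alpha)$ around the $x_0$-axis) and $\mathfrak{m} = \mathbb{R} K_1 \oplus \mathbb{R} K_2$, so that $\mathfrak{so}(3) = \mathfrak{k} \oplus \mathfrak{m}$ as vector spaces, which is \eqref{VR1}. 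Then I would record the structure constants: from the explicit generators one has $[K_0,K_1] = K_2$, $[K_0,K_2] = -K_1$, and $[K_1,K_2] = K_0$ (up to the overall sign convention fixed by the matrices $R_i$ in Section~\ref{SO3}). These three identities immediately give $[\mathfrak{k},\mathfrak{k}] = 0 \subset \mathfrak{k}$, $[\mathfrak{k},\mathfrak{m}] \subset \mathfrak{m}$, and $[\mathfrak{m},\mathfrak{m}] = \mathbb{R} K_0 = \mathfrak{k}$, which is precisely \eqref{VR2}. Hence $(SO(3), SO(2), \mathfrak{T})$ is a symmetric space, with $\mathfrak{T}$ the involution acting as $+1$ on $K_0$ and $-1$ on $K_1, K_2$.

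Next I would compute the dual symmetric Lie algebra via \eqref{VR3}: $\mathfrak{g}^* = \mathfrak{k} \oplus i\mathfrak{m} = \mathbb{R} K_0 \oplus \mathbb{R}(iK_1) \oplus \mathbb{R}(iK_2)$. Setting $L_1 \doteq iK_1$ and $L_2 \doteq iK_2$ (matching the notation for the boost generators in Section~\ref{isometrygroup}, where $L_1, L_2$ are obtained from the rotation generators precisely by the substitution encoded in the isomorphism $SO_\mathbb{C}(1,2) \cong SO_\mathbb{C}(3)$ displayed in Section on the complex Lorentz group), one reads off the brackets in $\mathfrak{g}^*$: $[K_0, L_1] = iK_2 = L_2$ and $[K_0, L_2] = -iK_1 = -L_1$, while $[L_1, L_2] = [iK_1, iK_2] = -[K_1,K_2] = -K_0$. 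These are exactly the commutation relations of $\mathfrak{so}(1,2)$ — indeed they coincide with the relations $[K_0, L_1] = iL_2$, $[L_2, K_0] = iL_1$, $[L_1, L_2] = -iK_0$ derived in the proof of Theorem~\ref{UIR-S1} once one accounts for the convention of including or omitting the factor $i$ in the generators. So $\mathfrak{g}^* \cong \mathfrak{so}(1,2)$, and by \eqref{VR2} it is the real Lie algebra of a simply connected group $G^*$, which here is (the universal cover of) $SO_0(1,2)$.

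The main thing to be careful about — rather than a genuine obstacle — is bookkeeping of sign and factor-of-$i$ conventions: the paper uses $L^{(\alpha)} = \cos\alpha\, L_1 + \sin\alpha\, L_2$ for Lorentz boost generators and $K^{(\alpha)} = \cos\alpha\, K_1 + \sin\alpha\, K_2$ for rotation generators, and the isomorphism between $SO_\mathbb{C}(1,2)$ and $SO_\mathbb{C}(3)$ is implemented by conjugation with $\mathrm{diag}(\mp i, 1, 1)$; I would make sure the chosen decomposition $\mathfrak{m} \mapsto i\mathfrak{m}$ is compatible with that conjugation so that the identification $\mathfrak{g}^* = \mathfrak{so}(1,2)$ is literally the one used elsewhere in the paper. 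Once the conventions are pinned down, the proof is a three-line verification of the bracket relations plus the observation that multiplying the two ``off-diagonal'' generators by $i$ flips exactly the sign of the bracket $[\mathfrak{m},\mathfrak{m}]$, turning the compact form $\mathfrak{so}(3)$ into the split form $\mathfrak{so}(1,2)$. I would close by remarking that this is the Lie-algebraic shadow of the geometric fact, already exploited throughout Part~\ref{Harmanaly} and in the tuboid constructions, that the Euclidean sphere $S^2$ and the de Sitter hyperboloid $dS$ are dual symmetric spaces for the pair $(SO(2) \subset G)$.
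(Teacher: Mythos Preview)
Your proposal is correct and follows exactly the approach the paper indicates: the paper's proof is the single line ``Decompose $so(3)$ according to \eqref{VR1} and verify \eqref{VR2},'' and you have carried out precisely that decomposition and verification in full detail, together with the identification of the dual via \eqref{VR3}. Your additional remarks on sign conventions and the link to the isomorphism $SO_{\mathbb{C}}(1,2)\cong SO_{\mathbb{C}}(3)$ are accurate and go beyond what the paper spells out.
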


\begin{proof}
Decompose $so(3)$ according to \eqref{VR1} and verify \eqref{VR2}.
\end{proof}

\begin{definition}
\label{defvirtrep}  
A {\em virtual representation} $( \wp, {\mathfrak H})$ of a symmetric space $(G, K , {\mathfrak T})$
consists of a separable Hilbert space ${\mathfrak H} $ together with a local group homomorphism $\wp$ 
from $G$ into linear operators densely defined on ${\mathfrak H}$, with the following properties: 
\begin{itemize}
\item [$ i.)$] 
$\wp_{|   H}$ is a continuous unitary representation of $K$ on ${\mathfrak H}$; 
\item [$ ii.)$]
there exists a neighbourhood $N$ of $\mathbb{1} \in G$, invariant under right translation by $K$, and
a linear subspace ${\mathfrak D}$, dense in ${\mathfrak H}$, 
such that 
\begin{itemize}
\item[---] ${\mathfrak D} \subset {\mathscr D}(\wp(g))$ for all $g \in N$;  and 
\item[---] if $g_1, g_2$ and $g_1 \circ g_2$ are all in $N$, then
\label{virtreppage2}
	\begin{equation} 
		\label{49th}
		\wp (g_2) \Psi \in {\mathscr D}(\wp(g_1))\; , \qquad \Psi \in {\mathfrak D}\; ,  
	\end{equation}
	and
	\[ 
		\wp(g_1) \wp(g_2) \Psi = \wp(g_1 \circ g_2) \Psi \; , \qquad \Psi \in {\mathfrak D}\; ; 
	\]
\end{itemize}
\item[$ iii.)$] if $\ell \in {\mathfrak m}$, $0 \le t \le 1$, and
	\[
		{\rm e}^{- t \ell} \in N  \; ,  \qquad 0 \le t \le 1 \; , 
	\]
then $ \wp ({\rm e}^{-t \ell })$ is a hermitian operator defined on ${\mathfrak D}$ and 
	\begin{equation}
	\label{52th}
		s-\lim_{t \to 0} \wp ({\rm e}^{-t \ell}) \Psi = \Psi \; , \qquad \Psi \in {\mathfrak D} \; . 
	\end{equation}
\end{itemize}
\end{definition}

\noindent 
The main result in the theory of virtual representations is the following:

\begin{theorem}[Fr\"ohlich, Osterwalder, and Seiler \cite{FOS}] 
\label{FOS}
Let $\wp$ be a virtual representation of 
a symmetric space $(G, K, {\mathfrak T})$, with $K$ compact. 
Then $\wp$ can be analytically continued to a unitary representation $\wp^*$ of $G^*$. 
\end{theorem}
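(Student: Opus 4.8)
\textbf{Plan of proof for Theorem \ref{FOS}.}
The plan is to follow the strategy of Fr\"ohlich, Osterwalder and Seiler, adapted to the compact situation at hand (in our application $K = SO(2)$ is compact, so we are in the favourable case). The starting point is the decomposition \eqref{VR1}--\eqref{VR3}: write ${\mathfrak g} = {\mathfrak k} \oplus {\mathfrak m}$, ${\mathfrak g}^* = {\mathfrak k} \oplus i{\mathfrak m}$, and recall that $\wp_{| K}$ is already a \emph{genuine} unitary representation of the compact group $K$. The task is to produce, from the hermitian local operators $\wp({\rm e}^{-t\ell})$, $\ell \in {\mathfrak m}$, a \emph{self-adjoint} generator, and then exponentiate to a unitary one-parameter group ${\rm e}^{it\widehat\ell}$ which, together with $\wp_{|K}$, integrates to a unitary representation of $G^*$.

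First I would, for each fixed $\ell \in {\mathfrak m}$, examine the map $t \mapsto \wp({\rm e}^{-t\ell})$ on the dense domain ${\mathfrak D}$. By property $ii.)$ this is a \emph{local} one-parameter family: $\wp({\rm e}^{-s\ell})\wp({\rm e}^{-t\ell})\Psi = \wp({\rm e}^{-(s+t)\ell})\Psi$ whenever all three exponentials lie in the neighbourhood $N$; by property $iii.)$ each operator is hermitian and $s\text{-}\lim_{t\to 0}\wp({\rm e}^{-t\ell})\Psi = \Psi$. This is exactly the data of a \emph{local symmetric semigroup} in the sense of Definition \ref{lsssg} (with the roles of $\theta$ and $t$ matched up). Hence Theorem \ref{K-L--F} of Fr\"ohlich and Klein--Landau applies and yields a \emph{unique self-adjoint} operator $\widehat\ell$ on ${\mathfrak H}$ such that $\wp({\rm e}^{-t\ell})\Psi = {\rm e}^{-t\widehat\ell}\Psi$ for $\Psi$ in the appropriate domain and $0 \le t$ small. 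Then ${\rm e}^{it\widehat\ell}$ is a genuine unitary one-parameter group; this is the ``virtual boost made real''. One does this for a basis of ${\mathfrak m}$ (in our case the single generator, or $L_1, L_2$ after complexification), obtaining self-adjoint operators $\widehat\ell_j$ and the unitary generators of $\wp_{|K}$.

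Next I would check that the commutation relations of $G^*$ hold for these generators, at least on a common invariant core. The relations $[{\mathfrak k},{\mathfrak k}]\subset{\mathfrak k}$, $[{\mathfrak k},{\mathfrak m}]\subset{\mathfrak m}$, $[{\mathfrak m},{\mathfrak m}]={\mathfrak k}$ in ${\mathfrak g}$ become, under $\ell \mapsto i\widehat\ell$ on the ${\mathfrak m}$-part, precisely the bracket relations of ${\mathfrak g}^*$. One extracts these relations from the local group homomorphism property \eqref{49th}: for $g_1, g_2, g_1\circ g_2 \in N$ one has $\wp(g_1)\wp(g_2)\Psi = \wp(g_1\circ g_2)\Psi$, and differentiating in the group parameters at the identity gives the Lie-algebra relations between the $\widehat\ell_j$ and the unitary generators on ${\mathfrak D}$. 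Since $K$ is compact, one can use $K$-finite vectors (or G\r{a}rding-type vectors built from $\wp_{|K}$) to build a dense domain of analytic vectors that is invariant under all the generators; on such a domain one invokes the Nelson--Flato--Simon integrability criteria (commuting relations on a common dense set of analytic vectors, plus essential skew-adjointness of the generators) to conclude that the Lie algebra spanned by $\{i\widehat\ell_j\} \cup \mathfrak{k}$ integrates to a strongly continuous unitary representation $\wp^*$ of the simply connected group $G^*$. Finally one checks that $\wp^*$ restricted to $K$ recovers $\wp_{|K}$ and that $\wp^*({\rm e}^{it\ell}) = {\rm e}^{it\widehat\ell}$ is the analytic continuation of $\wp({\rm e}^{-t\ell})$ in the appropriate sense, which is the assertion of the theorem.

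\textbf{Main obstacle.} The delicate point is \emph{not} the construction of the individual self-adjoint generators --- that is handed to us cleanly by the local symmetric semigroup theorem (Theorem \ref{K-L--F}) --- but the \emph{integration} step: passing from a Lie algebra of operators satisfying the right commutation relations \emph{only on a dense domain} to an honest global unitary representation of $G^*$. One must produce a single common dense domain of analytic vectors invariant under all generators and verify the hypotheses of a Nelson-type integrability theorem; the compactness of $K$ is what makes this feasible (one averages over $K$ to manufacture the invariant analytic domain), and it is precisely why the theorem is stated with ``$K$ compact''. Controlling domains when composing the virtual boosts --- ensuring \eqref{49th} is genuinely available on a large enough set, so that the local homomorphism identity can be differentiated --- is where the real work lies, and it is essentially the content of the Fr\"ohlich--Osterwalder--Seiler argument that we would be transcribing here.
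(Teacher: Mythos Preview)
The paper does not give its own proof of Theorem~\ref{FOS}: it is stated with attribution to Fr\"ohlich, Osterwalder and Seiler \cite{FOS} and used as a black box in the proof of Theorem~\ref{uo}. So there is nothing in the paper to compare your argument against beyond the bare citation.

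That said, your outline is a faithful reconstruction of the strategy of \cite{FOS}: extract self-adjoint generators from the hermitian local semigroups $t\mapsto\wp({\rm e}^{-t\ell})$ via the Fr\"ohlich/Klein--Landau theorem (Theorem~\ref{K-L--F} in the present paper), verify the Lie-algebra relations of ${\mathfrak g}^*$ on a suitable dense domain by differentiating the local homomorphism property, and then integrate using compactness of $K$ to build an invariant domain of analytic vectors. Your identification of the main obstacle --- the integration step and the construction of a common invariant domain --- is correct; this is precisely where the technical work of \cite{FOS} lies, and where the hypothesis that $K$ be compact enters essentially.
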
 

\section{A unitary representation of the Lorentz group}
\label{sec5.8}

We now apply the main result of  the previous subsection to the special case relevant in the present context. 

\label{umospage}
\begin{theorem} 
\label{uo}
The self-adjoint operators $K^V_0 $, $L^V_1 \doteq L^{(0)}_V$ and 
$L^V_2\doteq L^{(\pi/2)}_V$ defined in Proposition~\ref{knull-V-prop} and
Proposition  \ref{3.20}, respectively,  generate a unitary representation 
$\Lambda \mapsto U_V(\Lambda)$ of $SO_0(1,2)$ on~${\mathcal H}_V$.
\end{theorem}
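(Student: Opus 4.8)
The strategy is to invoke Theorem~\ref{FOS} (Fr\"ohlich--Osterwalder--Seiler) for the symmetric space $(SO(3), SO(2), {\mathfrak T})$, whose dual symmetric Lie algebra is $so(1,2)$. First I would assemble the Euclidean data into a virtual representation of $(SO(3), SO(2), {\mathfrak T})$ in the sense of Definition~\ref{defvirtrep}, with Hilbert space ${\mathfrak H} = {\mathcal H}_V$. The compact-subgroup part $\wp_{\upharpoonright SO(2)}$ is the strongly continuous unitary representation $\alpha \mapsto {\rm e}^{i\alpha K_0^V}$ of $SO(2)$ furnished by Proposition~\ref{knull-V-prop}. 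The $\mathfrak m$-part is supplied by the local symmetric semigroups: for each $\alpha$, the rotation $R^{(\alpha)}(\theta)$ acts on the Euclidean field and, via $\bigl(P^{(\alpha)}_V(\theta), {\mathscr D}^{(\alpha)}_{\theta,V}\bigr)$ from Proposition~\ref{3.20}, gives the hermitian contraction operators ${\rm e}^{-\theta L^{(\alpha)}_V}$ on ${\mathcal H}_V$, with weak continuity at $\theta = 0$ built into the definition of a local symmetric semigroup (Definition~\ref{lsssg}~$v.)$, Theorem~\ref{K-L--F}). Setting ${\mathfrak D} = \bigcup_{\alpha,\,0<\theta\le\pi}{\mathscr D}^{(\alpha)}_{\theta,V}$ (or more conservatively the span of the dense sets ${\mathcal V}_V\bigl({\mathcal M}^{(\alpha)}_{\theta,V}\bigr)$) one gets a dense domain; condition $iii.)$ of Definition~\ref{defvirtrep} is precisely the hermiticity and strong continuity statement of Proposition~\ref{3.20}, and condition $i.)$ is Proposition~\ref{knull-V-prop}.

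\textbf{Key steps.} The substantive point is condition $ii.)$: the \emph{local homomorphism} property, i.e.\ that the operators $\wp(g)$ built from the $K_0^V$-rotations and the semigroups ${\rm e}^{-\theta L^{(\alpha)}_V}$ compose correctly on the overlap of a neighbourhood $N$ of $\mathbb{1}\in SO(3)$, whenever $g_1, g_2, g_1g_2 \in N$. I would establish this at the Euclidean level first: all these operators are, by construction, images under the canonical maps ${\mathcal V}_V$ of the measure-preserving automorphisms ${\rm U}^{(\alpha)}(\theta)$ and $R_0(\alpha)_*$ acting on $L^p({\mathcal Q},\Sigma,{\rm d}\mu_V)$, which form a genuine (global) representation of $SO(3)$ on the path space (the rotation group $SO(3)$ leaves $S^2$ and hence ${\rm d}\mu_V$ invariant, since $V(S^2)$ is rotation invariant by construction of the interaction on the whole sphere). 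The composition law on ${\mathcal H}_V$ then follows whenever the relevant intermediate vectors stay in the reflection-positive subspace $L^2({\mathcal Q},\Sigma_{\overline{S_+}},{\rm d}\mu_V)$ — this is exactly the content of the multi-analyticity / reconstruction results (Proposition~\ref{boostanaliticity} and its proof, adapted to ${\rm d}\mu_V$ via Proposition~\ref{3.20} and Theorem~\ref{ToTa11}), which identify products ${\rm e}^{-\theta_n L^{(\alpha)}_V} A_n \cdots {\rm e}^{-\theta_1 L^{(\alpha)}_V} A_1 \Omega_V$ with ${\mathcal V}_V$ of the corresponding Euclidean products. The geometric constraint $\sum\theta_j \le \pi$ is what confines $g$ to a neighbourhood $N$ of the identity; this is why one obtains a \emph{virtual}, i.e.\ only locally defined, representation. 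Once conditions $i.)$--$iii.)$ are verified, Theorem~\ref{FOS} produces a unitary representation $\wp^* = U_V$ of $G^* = SO_0(1,2)$ (more precisely its simply connected cover, but the two-fold cover suffices here as elsewhere in the paper) on ${\mathcal H}_V$, whose generators are the self-adjoint operators $-iK_0^V$ (from the $SO(2)$ part) and $L^{(0)}_V, L^{(\pi/2)}_V$ (the analytic continuations of the $\mathfrak m$-generators), i.e.\ $L^V_1$ and $L^V_2$.

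\textbf{The main obstacle.} The delicate step is the verification of the local homomorphism property $ii.)$, specifically the domain inclusion \eqref{49th}: one must check that for $g_1, g_2, g_1g_2$ near $\mathbb{1}$, the vector $\wp(g_2)\Psi$ lies in the domain of $\wp(g_1)$ for $\Psi$ in the common core, and that the two sides agree. This requires care because the $L^{(\alpha)}_V$ for different $\alpha$ do not commute and the semigroups ${\rm e}^{-\theta L^{(\alpha)}_V}$ are only defined on shrinking domains ${\mathscr D}^{(\alpha)}_{\theta,V}$; the argument must track how the path-space supports of Euclidean sharp-time fields rotate under the composite $SO(3)$-motion and confirm that the relevant intermediate configurations remain supported in $\overline{S_+}$ (so that ${\mathcal V}_V$ is applicable) for $g$ in a sufficiently small $N$. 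The cleanest route is to reduce everything to the identity \eqref{equipathspace}-type equality of Euclidean correlation functions and then quote the reconstruction machinery of Klein--Landau and Fr\"ohlich verbatim; I would expect the proof to consist largely of pointing to Propositions~\ref{3.20}, \ref{knull-V-prop}, \ref{boostanaliticity} and Theorems~\ref{K-L--F}, \ref{ToTa11}, \ref{FOS}, with the genuinely new work being the bookkeeping that the $SO(3)$ action on ${\rm d}\mu_V$ is well-defined and rotation-invariant (inherited from $V(S^2)$) and that the neighbourhood $N$ can be chosen uniformly in $\alpha$.
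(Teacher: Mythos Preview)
Your overall strategy is correct and matches the paper: assemble a virtual representation of $(SO(3),SO(2),{\mathfrak T})$ on ${\mathcal H}_V$ from the Euclidean $SO(3)$ action on $({\mathcal Q},\Sigma,{\rm d}\mu_V)$, with $\wp_{|SO(2)}$ coming from Proposition~\ref{knull-V-prop} and the ${\mathfrak m}$-directions from the local symmetric semigroups of Proposition~\ref{3.20}, then apply Theorem~\ref{FOS}. The reduction of the composition law to the identity ${\rm U}(R_1){\rm U}(R_2)={\rm U}(R_1R_2)$ at the path-space level is also exactly what the paper does.

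There is, however, a genuine gap in your choice of domain. You propose ${\mathfrak D}=\bigcup_{\alpha,\theta}{\mathscr D}^{(\alpha)}_{\theta,V}$, a \emph{union} over $\alpha$. But Definition~\ref{defvirtrep}~$ii.)$ requires ${\mathfrak D}\subset{\mathscr D}(\wp(g))$ for \emph{all} $g$ in a fixed neighbourhood $N$ of $\mathbb{1}$. A vector in ${\mathscr D}^{(0)}_{\theta,V}$ is the image of a function supported on the lune swept out by $I_+$ under $R^{(0)}$-rotations; rotating that lune about a different axis (e.g.\ by $R^{(\pi/2)}(\theta')$) will in general push part of it into $S_-$, so such a vector need not lie in ${\mathscr D}(\wp(g))$ for $g\in N$ off the $R^{(0)}$-axis. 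Your union is therefore not a common domain for all $\wp(g)$, $g\in N$.

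The paper resolves this by taking instead the \emph{intersection}: ${\mathfrak M}_\theta=\bigcap_{\alpha\in[0,2\pi)}{\mathcal M}^{(\alpha)}_\theta$, which geometrically corresponds to functions measurable with respect to fields on a polar cap $\frown_\theta$ of radius $\pi/2-\theta$ around the north pole. Any $R\in N_\theta$ (rotations not moving the north pole outside this cap) keeps the cap inside $\overline{S_+}$, so $\wp(R)$ is defined on ${\mathscr D}_{\theta,V}={\mathcal V}_V({\mathfrak M}_\theta)$ for all such $R$ simultaneously. The price is that density of this much smaller domain is no longer obvious; the paper proves it separately as Lemma~\ref{qd} via a Reeh--Schlieder-type analytic continuation argument (analyticity in the boost parameter forces a vector orthogonal to the polar-cap subspace to be orthogonal to all of ${\mathcal U}_V(S^1)\Omega_V$). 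You correctly flagged the domain bookkeeping as the main obstacle, but the fix is the polar-cap intersection plus this density lemma, not a union.
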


\begin{proof} It is sufficient to construct  a virtual representation $\wp$ of  $SO(3)$ on~${\mathcal H}_V$ and 
then apply Theorem \ref{FOS}.

\begin{itemize}
\item[$ i.)$]
The self-adjoint operator $K_0^V$ generates a unitary representation of  the subgroup $K = SO(2)$ 
of  $G = SO(3)$; 
\item[$ ii.)$] For~$0 <  \theta < \pi/2$ let $N_{\theta}$ 
be the subset of elements in $SO(3)$ consisting of the rotations which (when acting on it) do not move the north pole $(r, 0, 0)$ outside of 
the polar cap 
	\[
		\left\{ 
		\left( \begin{smallmatrix} r \cos \theta' \\  r \cos \alpha \sin \theta' \\ 
		r \cos \alpha \sin \theta' \end{smallmatrix} \right) 
		\mid 0 \le \theta' < \theta, \alpha \in [0, 2 \pi) \right\}  \; . 
	\] 
Now recall \eqref{malphatheta} and set, for~$0 <  \theta < \pi/2$,
	\[
		{\mathscr D}_{\theta, V}={\mathcal V}_V ({\mathfrak M}_{\theta} ) \quad
			\hbox{with} \quad
		{\mathfrak M}_{\theta}= \bigcap_{\alpha \in [0, 2 \pi)} {\mathcal M}^{(\alpha)}_{\theta} \; . 
	\]
In other words, given the polar cap $\frown_\theta$ with distance $\theta$ to the equator, 
the set~${\mathfrak M}_{\theta}$ is the set of all $ F \in L^{2}({\mathcal Q}, \Sigma_{\frown_\theta}, {\rm d} \mu_V ) $, 
with $ \Sigma_{\frown_\theta}$ the smallest $\sigma$-algebra for which the functions 
$\{ \Phi(f) \mid f \in {\mathscr D}_{{\mathbb R}} ( \frown_\theta) \} $ are measureable.
In Lemma \ref{qd}
we will show that $ {\mathscr D}_{\theta, V} $ is dense in ${\mathcal H}_{V}$. 

\smallskip
It follows from the definitions of ${\mathfrak T}$ and $\Theta$ that for $R \in SO(3)$
	\[
		{\rm U}(R) (\Theta F)  = \Theta {\rm U}({\mathfrak T}(R)) F  \; , 
		\qquad F \in C^\infty (S^2) \; .
	\]
Clearly, for $R \in N_\theta$ and $F, G \in {\mathfrak M}_{\theta}$
	\begin{align*}
		\langle G, {\rm U}(R) F \rangle_{V} 
		&=  \int_{{\mathcal Q}} {\rm d} \mu_V \;  \overline{ \Theta (G) } \; {\rm U}(R) F  \nonumber \\ 
		&=  \int_{{\mathcal Q}} {\rm d} \mu_V \;  \overline{ {\rm U}(R^{-1})\Theta (G) } \,  F  \nonumber \\ 
		&=  \int_{{\mathcal Q}} {\rm d} \mu_V \;  \overline{ \Theta ( {\rm U}( {\mathfrak T}(R^{-1}) G) } \,  F \; .		\end{align*} 
Now use the Schwarz inequality to show that the intersection of the kernel~${\mathcal N}_V$ of~${\mathcal V}_V$
with~${\mathscr D}_{\theta, V}$ is invariant under the map $F \mapsto {\rm U}(R) F$ for $R \in N_\theta$, $\theta > 0$. 
Thus, for each $R \in N_\theta  \subset SO(3)$, the map 
	\begin{align*}
	\wp (R) \colon \qquad {\mathscr D}_{\theta, V} \quad & \to  \quad {\mathcal H}_V
		\nonumber \\
	 {\mathcal V}_V (F) & \mapsto   {\mathcal V}_V ( {\rm U}(R) F) \; , 
	 \qquad   F \in {\mathfrak M}_{\theta}  \; , 
	\end{align*}
is well-defined, verifying \eqref{49th}. 

\smallskip
Now, if $F \in {\mathfrak M}_{\theta}$ and $R_1 , R_2 \in N_{\theta}$ 
as well as $R_1 R_2 \in N_{\theta} $, then 
	\[
	{\mathcal V}_V \bigl( {\rm U}(R_1){\rm U}(R_2) F \bigr) = 
	{\mathcal V}_V \bigl( {\rm U}(R_1R_2) F \bigr)\; . 
	\]
For example, for $\gamma$ and $\theta$ sufficiently small and $F \in {\mathfrak M}_{\theta}$
	\[
		{\rm e}^{i \gamma K_0^{V}}   {\rm e}^{-\theta L^{(\alpha)}_{V}} {\mathcal V}_V F  
		=  
			{\mathcal V}_V \bigl( {\rm U} \bigl( R_{0} (\gamma)  R^{(\alpha)} (\theta) \bigr) F \bigr)\;   . 
	\]
\item[$ iii.)$]
The group   $R \mapsto {\rm U}(R)$, $R \in SO(3)$, acts continuously on  $L^2 ( {\mathcal Q}, \Sigma, {\rm d} \Phi_C)$ and ${\mathcal V}_V$ 
is continuous on~$ {\mathcal M}_\theta$. Thus the vector valued function
$N_\theta \ni R \mapsto \wp (R) \Psi$ is continuous for each $\Psi \in {\mathscr D}_{\theta, V}$.
Thus $\wp (R)  \to \mathbb{1}$ as $R  \to \mathbb{1}$, verifying \eqref{52th}.
\end{itemize}
It follows that $\wp$ is a virtual representation of $SO(3)$ on~${\mathcal H}_{V}$.
\end{proof}

\begin{remark} The theory of virtual representations provides a general argument, which also applies in the free case. 
However, for the free field, we have already verified---by directly computing the Lie brackets on the eigenfunctions of 
$K_0$ in the proof of Proposition \ref{UIR-S1}---that 
the generators  $L^{(\alpha)} 
= {\rm d} \Gamma ( \omega \cos_{\psi + \alpha}) $, $\alpha \in [0, 2 \pi)$, of the boosts together with the generator $K_0$ of the rotations
generate a representation of $SO_0(1,2)$ on the Fock space $\Gamma \bigl(\widehat{\mathfrak h}(S^1) \bigr)$. 
In the general case, such a direct proof is 
not available.  
\end{remark}

\begin{lemma} 
\label{qd}
Let $0< \theta < \pi/2$. 
Then the set ${\mathfrak M}_\theta$ is a quantization 
domain\index{quantization domain}~\cite{HJJ}; \emph{i.e.}, the set $ {\mathscr D}_{\theta, V} $  is dense in ${\mathcal H}_{V}$. 
\end{lemma}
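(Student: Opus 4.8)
The plan is to show that the closure of $\mathscr{D}_{\theta,V}={\mathcal V}_V({\mathfrak M}_\theta)$ is all of ${\mathcal H}_V$ by reducing the statement, via the absolute continuity of ${\rm d}\mu_V$ with respect to the Gaussian measure ${\rm d}\Phi_C$, to a density statement about the Gaussian (free) theory, where it follows from the Markov property and the Reeh--Schlieder type density results established earlier. Concretely, since ${\rm e}^{-V(S^2)}\in L^1({\mathcal Q},\Sigma,{\rm d}\Phi_C)$ and ${\rm e}^{-V(S^2)}>0$ almost everywhere, the identity map $F\mapsto F$ sends the dense subspace $L^\infty({\mathcal Q},\Sigma_{\frown_\theta},{\rm d}\Phi_C)\subset {\mathfrak M}_\theta$ into ${\mathfrak M}_\theta$ with dense range in $L^2({\mathcal Q},\Sigma_{\overline{S_+}},{\rm d}\mu_V)$; hence it suffices to show that ${\mathcal V}_V$ applied to bounded $\Sigma_{\frown_\theta}$-measurable functions is dense in ${\mathcal H}_V$.

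First I would recall, following the reconstruction of ${\mathcal H}_V$ in Section~\ref{sec:6.1}, that a vector $\Psi\in{\mathcal H}_V$ orthogonal to $\mathscr{D}_{\theta,V}$ satisfies $\langle\Psi,{\mathcal V}_V(F)\rangle_V=0$ for all bounded $\Sigma_{\frown_\theta}$-measurable $F$. Writing $\Psi={\mathcal V}_V(G)$ with $G\in L^2({\mathcal Q},\Sigma_{\overline{S_+}},{\rm d}\mu_V)$, this reads $\int_{\mathcal Q}{\rm d}\mu_V\,\overline{\Theta(G)}\,F=0$. Since $\Theta$ maps $\Sigma_{\overline{S_+}}$-measurable functions to $\Sigma_{\overline{S_-}}$-measurable ones and $\frown_\theta\subset S^1\subset\overline{S_-}$, I would use the Markov property (Theorem~\ref{martheo}, together with the pre-Markov structure of Lemma~\ref{detheo} and Lemma~\ref{dlemma}) to pass the conditional expectation onto $\Theta(G)$: conditioning a $\Sigma_{\overline{S_-}}$-measurable function to the fields in $S^1$ equals conditioning it to the fields in $\overline{S_+}\cap\overline{S_-}=S^1$, and then the orthogonality relation forces ${\mathcal E}_{\Sigma_{\frown_\theta}}\big(\Theta(G)\,{\rm e}^{-V(S^2)}\big)=0$ for every $0<\theta<\pi/2$. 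Letting $\theta\uparrow\pi/2$ and using that $\bigvee_{0<\theta<\pi/2}\Sigma_{\frown_\theta}=\Sigma_{S^1}$ (each $\Sigma_{\frown_\theta}$ controls the fields supported on a neighbourhood of the equator, and these exhaust the time-zero fields supported on $S^1$ as in Lemma~\ref{1.0b}), one concludes ${\mathcal E}_{\Sigma_{S^1}}\big(\Theta(G)\,{\rm e}^{-V(S^2)}\big)=0$. But the latter, combined with the Markov property applied once more and the cyclicity of $\Omega_V$ for ${\mathcal U}_V(S^1)$ (Theorem~11.2 of~\cite{KL1}, as already invoked in Section~\ref{sec:6.1}), implies $\langle\Psi,A^V\Omega_V\rangle_V=0$ for all $A\in L^\infty({\mathcal Q},\Sigma_{S^1},{\rm d}\mu_V)$, hence $\Psi=0$ because $\overline{{\mathcal U}_V(S^1)\Omega_V}={\mathcal H}_V$.

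Alternatively, and perhaps more cleanly, I would first prove the statement for the \emph{free} measure ${\rm d}\Phi_C$ --- where ${\mathscr D}_\theta={\mathcal V}({\mathfrak M}_\theta)$ with ${\mathfrak M}_\theta=\bigcap_\alpha{\mathcal M}^{(\alpha)}_\theta$ and the Osterwalder--Schrader space ${\mathcal H}\cong\Gamma(\widehat{\mathfrak h}(S^1))$ --- by noting that $L^\infty({\mathcal Q},\Sigma_{S^1},{\rm d}\Phi_C)\subset{\mathfrak M}_\theta$ for every $\theta<\pi/2$, so Lemma~\ref{ostheo} already gives $\overline{{\mathscr D}_\theta}={\mathcal H}$; then transfer to ${\mathcal H}_V$ using that the linear map ${\mathcal V}(F)\mapsto{\mathcal V}_V(F)$, $F$ bounded $\Sigma_{\overline{S_+}}$-measurable, has dense range (because ${\rm e}^{-V(S^2)/2}$ is a strictly positive $L^2$-function, so multiplication by it has dense range in $L^2({\rm d}\Phi_C)$, whence bounded functions are $\|\cdot\|_V$-dense in $L^2({\rm d}\mu_V)$) and is norm-continuous from $\|\cdot\|_{\rm os}$ to $\|\cdot\|_V$ on the appropriate domain. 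The main obstacle I anticipate is the interchange-of-limits argument needed to identify $\bigvee_{\theta<\pi/2}\Sigma_{\frown_\theta}$ (or the corresponding statement that $\bigcup_{\theta<\pi/2}{\mathfrak M}_\theta$ generates enough of the time-zero algebra) and to justify that passing to the limit $\theta\uparrow\pi/2$ does not lose density once the conditional expectations onto $\Sigma_{S^1}$ are in play; this requires care with the $L^p$-continuity of the maps $\Phi(0,h)$ and ${\rm U}^{(\alpha)}(\theta)$ established in Sections~\ref{sec:3.2n}--\ref{sec1.3}, but involves no genuinely new ideas beyond the Markov property and cyclicity already available.
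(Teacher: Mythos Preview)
Your proposal rests on a geometric misreading of the polar cap $\frown_\theta$. In the paper, ${\mathfrak M}_\theta$ consists of $L^2$-functions measurable with respect to $\Sigma_{\frown_\theta}$, where $\frown_\theta$ is the set of points on the upper hemisphere whose angular distance to the equator is \emph{at least} $\theta$; for $\theta>0$ this cap is disjoint from $S^1$. Consequently your claim that $L^\infty({\mathcal Q},\Sigma_{S^1},{\rm d}\Phi_C)\subset{\mathfrak M}_\theta$ is false, and likewise $\bigvee_{0<\theta<\pi/2}\Sigma_{\frown_\theta}$ is not $\Sigma_{S^1}$ but the $\sigma$-algebra of the open upper hemisphere $S_+$ with the equator excluded. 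Both of your arguments collapse at this point: in the first approach the ``$\theta\uparrow\pi/2$'' limit shrinks the cap to the north pole rather than exhausting $S^1$, and in any case the lemma requires density for each \emph{fixed} $\theta$, which a union-over-$\theta$ argument cannot give; in the second approach the inclusion that would make the free case trivial simply does not hold.

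What is actually needed is a Reeh--Schlieder type analyticity argument, which is what the paper supplies. One takes $\Psi\perp{\mathscr D}_{\theta,V}$ and tests against vectors ${\rm e}^{i\Phi_V(h)}\Omega_V$ with $h$ supported in a short arc $I_{\alpha,\theta+\epsilon}\subset I_\alpha$ well away from the endpoints of a half-circle. The function
\[
z\longmapsto \bigl\langle \Psi,\,{\rm e}^{-z L^{(\alpha)}_V}\,{\rm e}^{i\Phi_V(h)}\Omega_V\bigr\rangle_V
\]
is analytic on the strip $0<\Re z<\pi$ (this is exactly the content of the local symmetric semigroup, Proposition~\ref{3.20}). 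For $\Re z=\theta'$ in a suitable open interval $J$, the Euclidean rotation $R^{(\alpha)}(\theta')$ carries the support arc $I_{\alpha,\theta+\epsilon}$ into the polar cap $\frown_\theta$, so the vector lands in ${\mathscr D}_{\theta,V}$ and the scalar product vanishes there. Analytic continuation then forces the function to vanish identically, in particular at $z=0$, giving $\langle\Psi,{\rm e}^{i\Phi_V(h)}\Omega_V\rangle_V=0$. Since such arcs cover $S^1$ and ${\mathcal U}_V(S^1)\Omega_V$ is dense, $\Psi=0$. The essential idea you are missing is this transport of time-zero vectors into the polar cap via the imaginary-time boost together with analytic continuation; no amount of Markov-property conditioning will bridge the gap between $\Sigma_{\frown_\theta}$ and $\Sigma_{S^1}$, because these $\sigma$-algebras live over disjoint regions of the sphere.
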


\begin{remark} 
In fact, if $O$ is any open set in $S_+$, then the image of 
	\[
	{\mathfrak M} (O) \doteq L^{2}({\mathcal Q}, \Sigma_{O}, {\rm d} \mu_V) 
	\]
under ${\mathcal V}_V$ is dense in ${\mathcal H}_{V}$. Here $ \Sigma_{O}$ is the smallest 
$\sigma$-algebra for which the functions 
$\{ \Phi(f) \mid f \in {\mathcal D}_{{\mathbb R}} ( O) \} $ are measureable.
\end{remark}

\begin{proof} In order to show that ${\mathscr D}_{\theta, V}$ is dense in  ${\mathcal H}_{V}$, 
it is sufficient to show that 
if  $\Psi  \perp {\mathscr D}_{\theta, V}$ is a vector in the orthogonal complement of ${\mathscr D}_{\theta, V}
\subset {\mathcal H}_{V}$, then it equals the zero-vector.
We have already seen that ${\mathcal U}_{V} (S^1) \Omega_{V}$ is 
dense in ${\mathcal H}_{V}$.
Thus it is sufficient to show that	
	\[
		\langle \Psi , {\rm e}^{i\Phi_{V}(h)} \Omega_{V} \rangle_{V} = 0 
		\qquad \forall \, {\rm e}^{i\Phi_{V}(h)}  \in {\mathcal U}_{V} (S^1) \;  ,  
	\]
as this would imply that $\Psi $ is the zero-vector. Moreover,  
	\[
	{\mathcal U}_{V} (S^1) = \bigvee_{ I } \,  {\mathcal U}_{V} ( I)
	\]
with $\cup I$ a covering of $S^1$ in terms of open intervals. Thus it is sufficient to show that
	\[
		\langle \Phi , {\rm e}^{i\Phi_{V}(h)} \Omega_{V} \rangle_{V}  = 0 
		\qquad \forall \, {\rm e}^{i\Phi_{V}(h)}  \in {\mathcal U}_{V} ( I) \;  , 
	\]
with  $I$ an arbitrary fixed interval in the covering of $S^1$. 
For the covering we choose sufficiently many circle segments 
	\[
	I_{\alpha, \theta + \epsilon} 
	= \{ \vec x \in I_\alpha \mid {\rm dist} (\vec x , \partial I_\alpha ) > \theta +\epsilon \} \; ,
	\qquad 0 < \epsilon \ll \theta \; , 
	\]
of equal size, consisting of points in the interior of the half-circle $I_\alpha$, 
which are more than $\theta + \epsilon$ away from the end points of the half-circle. 	

Now consider, for $h \in {\mathcal D}_{\mathbb R} (I_{\alpha, \theta + \epsilon} )$ fixed,  the analytic function
	\begin{equation}
		\label{f24} 
		z \mapsto 
		\langle \Psi , {\rm e}^{-z L^{(\alpha)}_{V}} {\rm e}^{i\Phi_{V}(h)} \Omega_{V} \rangle_{V}  \;  , 
		\qquad \{ z \in {\mathbb C} \mid 0 < \Re z  < \pi\} \;  . 	
		\end{equation} 
By construction there exists an open interval $J$ (whose size depends on $\epsilon$) such that 
	\[
	R^{(\alpha)} (\theta') I_{\alpha, \theta + \epsilon} \subset \; \frown_\theta \; ,
	\qquad 
	\theta' \in J \; , 
	\]
and consequently
	\[
	 {\rm e}^{-\Re z L^{(\alpha)}_{V}} {\rm e}^{i\Phi_{V}(h)} \Omega_{V} \in {\mathscr D}_{\theta, V} \; ,
	 \qquad 
	\Re z  \in J \; , 
	\] 
and, since $\Psi  \perp {\mathscr D}_{\theta, V}$,
the analytic function \eqref{f24} vanishes on an open line segment in 
the interior of its domain, and is therefore identical zero. 
\end{proof}

\section{Perturbation theory of generalised path spaces}
\label{FKN}

In the previous section we have constructed a representation of $SO_0(1,2)$ on a new Hilbert space~${\mathcal H}_{V}$.
In this section, we discuss the action of the interacting field on the Fock space of the free field.  
To do so, construct (see \cite{KL1}) a generalised path space $({\mathcal Q}, \Sigma, \Sigma^{(\alpha)}, 
{\rm U}^{(\alpha)}_{\rm int} (\theta), 
\Theta, {\rm d} \Phi_C)$, which is equivalent (see~\eqref{equipathspace}) 
to the one given in Proposition \ref{gspint}. 

Set
	\begin{equation}
	\label{fkn17}
	V^{(\alpha)} \doteq V_0 (\cos_{\psi + \alpha} \chi_{I_\alpha} ) \; , 
	\end{equation}
where $V_0$ was defined in~\eqref{bbv-interaction} and $\chi_{I_\alpha}$ denotes the characteristic function of the half-circle $I_\alpha \subset S^1$. It follows from 
Lemma \ref{wickooo} and \cite[Lemma~3.15]{SHK}, respectively, that
	\[
	V^{(\alpha)} \in L^{1}({\mathcal Q},\Sigma^{(\alpha)}, {\rm d} \Phi_C) \quad \text{and} \quad 
	{\rm e}^{- 2 \pi V^{(\alpha)}}\in L^{1}({\mathcal Q}, \Sigma^{(\alpha)}, {\rm d} \Phi_C) \; . 
	\]
Hence  the Feyman-Kac-Nelson kernels $\bigl\{F^{(\alpha)}_{[0,\theta']}\bigr\}_{0\leq \theta'\leq \pi}$,   
	\[
		F^{(\alpha)}_{[0,\theta']}
		={\rm e}^{-\int_{0}^{\theta'}{\rm d} \theta \; {\rm U}^{(\alpha)}(\theta)V^{(\alpha)}},
		\quad 0\leq \theta'\leq \pi \;  ,
	\]
belong to $L^{2}({\mathcal Q}, \Sigma_{\overline{S_+}}, {\rm d} \Phi_C)$. 

\bigskip
Next consider the sets  
	\[
		{\mathcal M}^{(\alpha)}_{\theta, {\rm int}} \doteq \hbox{ linear span of }  
		\bigcup_{0\leq \theta''\leq \pi-\theta}F^{(\alpha)}_{[0,\theta'']}
		\, L^{\infty} \left({\mathcal Q}, \Sigma^{(\alpha)}_{[0, \pi -\theta]}, {\rm d} \Phi_C \right) \; . 
	\]
Here $\Sigma^{(\alpha)}_{[0, \pi -\theta]}\doteq \bigvee_{\theta' \in [0, \pi -\theta]} {\rm U}^{(\alpha)} (\theta' )\Sigma^{(\alpha)}$. 

\begin{definition}
Set, for $0\leq \theta\leq \pi$,  
	\begin{align*}
				 {\rm U}^{(\alpha)}_{\rm int}(\theta')\colon   
				 {\mathfrak M}^{(\alpha)}_{\theta,  {\rm int}} & \to  L^{2} 
				 \Bigl({\mathcal Q},\Sigma^{(\alpha)}_{[0, \pi -\theta- \theta']}, 
			 	 {\rm d} \Phi_C \Bigr) \nonumber \\ 
			  G & \mapsto  F^{(\alpha)}_{[0,\theta']}{\rm U}^{(\alpha)} (\theta')G \; . 
	\end{align*}
The map $\theta' \mapsto {\rm U}^{(\alpha)}_{\rm int}(\theta')$ defines the interacting rotations on the sphere.
\end{definition}

The one-parameter group $\theta' \mapsto {\rm U}^{(\alpha)}_{\rm int}(\theta')$ induces 
a local symmetric semigroup on~${\mathcal H}$: 

\begin{proposition}
\label{hos-section}
\label{spaces}
Let ${\mathscr D}_{\theta, {\rm int}}^{(\alpha)}\doteq{\mathcal V} \bigl({\mathcal M}^{(\alpha)}_{\theta, {\rm int}} \bigr)$ 
and set, for  $0\leq \theta'\leq \theta $, 
	\begin{align*}
				P^{(\alpha)}_{\rm int}(\theta')\colon  {\mathscr D}_{\theta, {\rm int}}^{(\alpha)} & \to 
				 {\mathcal H} 
			 \\
				{\mathcal V} (G) & \mapsto  {\mathcal V}\bigl({\rm U}_{\rm int}^{(\alpha)}(\theta')G \bigr)   \; . 
	\end{align*}
$\bigl(P_{\rm int}^{(\alpha)} (\theta), {\mathscr D}_{\theta, {\rm int}}^{(\alpha)} \bigr)$ is
a local symmetric semigroup on~${\mathcal H}$ with generator  $H^{(\alpha)}_{\rm os} $.
\end{proposition}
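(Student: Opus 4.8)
The statement asserts that $\bigl(P_{\rm int}^{(\alpha)}(\theta), {\mathscr D}_{\theta,{\rm int}}^{(\alpha)}\bigr)$ is a local symmetric semigroup in the sense of Definition~\ref{lsssg}, with generator that we then name $H^{(\alpha)}_{\rm os}$. So the plan is simply to verify the five properties $i.)$--$v.)$ of that definition and then invoke the Fr\"ohlich--Klein--Landau theorem (Theorem~\ref{klein-l-f}) to produce the self-adjoint generator. This is structurally the same verification as in the proof of Proposition~\ref{3.19}, the difference being that we now carry along the Feynman-Kac-Nelson kernels $F^{(\alpha)}_{[0,\theta']}$. The key bookkeeping device is the equivalence (in the sense of \eqref{equipathspace}) between the generalised path space $({\mathcal Q},\Sigma,\Sigma^{(\alpha)},{\rm U}^{(\alpha)}_{\rm int}(\theta),\Theta,{\rm d}\Phi_C)$ and $({\mathcal Q},\Sigma,\Sigma^{(\alpha)},{\rm U}^{(\alpha)}(\theta),\Theta,{\rm d}\mu_V)$ from Proposition~\ref{gspint}; this is the standard Feynman-Kac-Nelson construction of \cite{KL1}, which I would cite rather than reprove.

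First I would check the nesting and density of the domains ($i.$): the subspaces ${\mathscr D}_{\theta,{\rm int}}^{(\alpha)} = {\mathcal V}\bigl({\mathcal M}^{(\alpha)}_{\theta,{\rm int}}\bigr)$ decrease in $\theta$ because ${\mathcal M}^{(\alpha)}_{\theta,{\rm int}}$ is defined with $\Sigma^{(\alpha)}_{[0,\pi-\theta]}$ and a union over $\theta''\le\pi-\theta$, both of which shrink as $\theta$ grows; density of $\bigcup_{0<\theta\le\pi}{\mathscr D}_{\theta,{\rm int}}^{(\alpha)}$ in ${\mathcal H}$ follows from the fact that the FKN kernels $F^{(\alpha)}_{[0,\theta'']}$ are strictly positive $L^2$-functions and ${\mathcal U}(S^1)\Omega$ is dense (Lemma~\ref{ostheo}). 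For the semigroup property ($ii.$ and $iii.$) the essential point is the multiplicativity of the FKN kernels along a rotation orbit,
\[
	F^{(\alpha)}_{[0,\theta'+\theta'']} = F^{(\alpha)}_{[0,\theta']}\,{\rm U}^{(\alpha)}(\theta')F^{(\alpha)}_{[0,\theta'']}\; ,
\]
which is immediate from the additivity of the exponent $\int_0^{\theta'+\theta''}{\rm U}^{(\alpha)}(\theta)V^{(\alpha)}\,{\rm d}\theta$ together with the one-parameter group property of ${\rm U}^{(\alpha)}$; combined with ${\rm U}^{(\alpha)}(\theta')\Sigma^{(\alpha)}_{[0,\pi-\theta]}\subset\Sigma^{(\alpha)}_{[0,\pi-\theta+\theta']}$ this gives $P^{(\alpha)}_{\rm int}(\theta'){\mathscr D}^{(\alpha)}_{\theta,{\rm int}}\subset{\mathscr D}^{(\alpha)}_{\theta-\theta',{\rm int}}$ and $P^{(\alpha)}_{\rm int}(\theta')P^{(\alpha)}_{\rm int}(\theta'') = P^{(\alpha)}_{\rm int}(\theta'+\theta'')$ on the appropriate domain. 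Symmetry ($iv.$) is where reflection positivity enters: for $F,G\in{\mathcal M}^{(\alpha)}_{\theta,{\rm int}}$ one writes $\langle {\mathcal V}(F), P^{(\alpha)}_{\rm int}(\theta'){\mathcal V}(G)\rangle_{\rm os} = \int {\rm d}\Phi_C\,\overline{\Theta(F)}\,F^{(\alpha)}_{[0,\theta']}\,{\rm U}^{(\alpha)}(\theta')G$ and uses $\Theta{\rm U}^{(\alpha)}(\theta') = {\rm U}^{(\alpha)}(-\theta')\Theta$ together with the OS-positivity and the invariance properties already recorded in Proposition~\ref{gspint}, mimicking the Klein--Landau argument for Theorem~\ref{th3.10}; here the support of the kernel within $\overline{S_+}$ and the fact that ${\rm U}^{(\alpha)}(\theta')$ for $0\le\theta'\le\theta$ keeps things inside $\overline{S_+}$ is what makes the manipulation legitimate. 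Weak continuity ($v.$) follows from strong continuity of $\theta'\mapsto{\rm U}^{(\alpha)}(\theta')$ on $L^2({\mathcal Q},\Sigma,{\rm d}\Phi_C)$, continuity of ${\mathcal V}$, and continuity of $\theta''\mapsto F^{(\alpha)}_{[0,\theta'']}$ in $L^2$, the latter being a consequence of $V^{(\alpha)},{\rm e}^{-2\pi V^{(\alpha)}}\in L^1$ plus the standard estimates of \cite{SHK}.

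With $i.)$--$v.)$ in place, Theorem~\ref{klein-l-f} delivers a unique self-adjoint operator, which we define to be $H^{(\alpha)}_{\rm os}$, satisfying $P^{(\alpha)}_{\rm int}(\theta')\Psi = {\rm e}^{-\theta' H^{(\alpha)}_{\rm os}}\Psi$ for $\Psi\in{\mathscr D}^{(\alpha)}_{\theta,{\rm int}}$ and $0\le\theta'\le\theta$. The main obstacle I anticipate is not any single property in isolation but the careful matching of measurability domains: one must verify that for $G\in{\mathcal M}^{(\alpha)}_{\theta,{\rm int}}$ the product $F^{(\alpha)}_{[0,\theta']}{\rm U}^{(\alpha)}(\theta')G$ is genuinely $\Sigma_{\overline{S_+}}$-measurable and square-integrable (so that the map into $L^2({\mathcal Q},\Sigma^{(\alpha)}_{[0,\pi-\theta-\theta']},{\rm d}\Phi_C)$ is well-defined), and that multiplication by the bounded $\Sigma^{(\alpha)}$-factors preserves the kernel ${\mathcal N}$ of ${\mathcal V}$ — this is exactly the point where the two-sided Markov property of Proposition~\ref{gspint} and the conditional-expectation identities of Theorem~\ref{martheo} are used, and getting the $\sigma$-algebra indices right through all the unions and intersections is the fiddly part. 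Everything else is a transcription of the free case (Proposition~\ref{3.19}) with the kernels carried along.
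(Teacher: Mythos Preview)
Your proposal is correct and follows exactly the approach the paper implicitly intends: the paper gives no explicit proof for this proposition, treating it as a routine verification parallel to Proposition~\ref{3.19} (whose proof reads simply ``Verify the conditions $i.)$--$v.)$ of Definition~\ref{lsssg}''), with the FKN kernels carried along as in the Klein--Landau framework~\cite{KL1}. Your identification of the key ingredients---multiplicativity of the kernels for the semigroup law, reflection positivity for symmetry, and the measurability/$L^2$-bookkeeping as the only delicate point---is accurate and complete.
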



\part{Interacting Quantum Fields}

\chapter{The ${\mathscr P}( \varphi)_2$ Model on the de Sitter Space}

\section{Identification of Hilbert spaces}
\label{sec6.2}
The Radon-Nikodym theorem\index{Radon-Nikodym theorem} implies that the interacting measure
$ {\rm d} \mu_V$ is absolutely continuous with respect to~the Gaussian measure 
${\rm d} \Phi_C$. Consequently, 
	\begin{equation}
		\label{jss} L^{\infty}({\mathcal Q}, \Sigma_{S^1} , {\rm d} \mu_V) \cong  
		L^{\infty}({\mathcal Q}, \Sigma_{S^1}, {\rm d} \Phi_C) \; . 
	\end{equation}
Moreover, 
${\mathcal U}_V (S^1) \Omega_V$ is dense in ${\mathcal H}_V$ 
and ${\mathcal U} (S^1) \Omega$
is dense in ${\mathcal H}$. As we will see next, ${\mathcal U} (S^1) \Omega_{\rm int}$
is dense in ${\mathcal H}$ as well.
The vectors $\Omega_V$ and
$ \Omega_{\rm int} $ were defined in~(\ref{dpv}) and (\ref{intvacuum}), respectively. 
Note that 
\begin{equation}
	\label{Vhalbkugel} 
	{\rm e}^{- \pi H^{(\alpha)}} \Omega 
	= {\mathcal V} \bigl( F^{(\alpha)}_{[0, \pi]} {\rm U}^{(\alpha)} (\pi) 1 \bigr) 
	= {\mathcal V} \bigl( F^{(\alpha)}_{[0, \pi]}  \bigr) 
	= {\mathcal V}  \bigl ({\rm e}^{- V  (\overline{S_+} )} \bigr) \; .
\end{equation}
The first equality follows from the reconstruction theorem, but it can also be verified directly using the Trotter product\index{Trotter
product formula} formula:
\begin{align*}
& {\mathcal V}  \Bigl( {\rm e}^{- \int_0^\pi U (\theta) V_0 (\cos_\psi) {\rm d} \theta }  \Bigr) \\
& \qquad = \lim_{n \to \infty}
{\mathcal V} \Bigl( {\rm e}^{- \sum_{k=1}^n U (k \pi / n ) V_0 (\cos_\psi)  }  \Bigr) \\
& \qquad =  \lim_{n \to \infty}
{\mathcal V} \Bigl( 
\underbrace{ 
{\rm e}^{- U (\pi/n ) V_0 (\cos_\psi)  } \cdots {\rm e}^{- U (\pi  ) V_0 (\cos_\psi)  } }_{n \  terms}  \Bigr) \\
& \qquad = s\mbox{-}\lim_{n \to \infty}
\Bigl(  
\underbrace{ {\rm e}^{- \frac{\pi}{n} {\rm d} \Gamma (\omega r) \cos_\psi    } {\rm e}^{- V_0 (\cos_\psi)  } 
\cdots {\rm e}^{- \frac{\pi}{n} {\rm d} \Gamma (\omega r) \cos_\psi    }
{\rm e}^{- V_0 (\cos_\psi)  }  }_{n \  terms} \Bigr) \Omega \\
& \qquad = 
{\rm e}^{- \pi H^{(0)}} \Omega \; . 
\end{align*}
In fact, the  identity \eqref{Vhalbkugel} holds for all $\alpha \in [0, 2\pi)$, \emph{i.e.}, 
	\[
		\frac{{\rm e}^{-\pi H^{(\alpha)}}\Omega }{ 
		\|{\rm e}^{-\pi H^{(\alpha)}}\Omega \| } = \Omega_{\rm int} \; , 
		\qquad   \alpha \in [ \, 0, 2\pi ) \; .  
	\]
The vector $\Omega_{\rm int}$ was defined in Equ.~\eqref{intvacuum}.
\color{black}

\begin{theorem} The vector $\Omega_{\rm int}$ is cyclic and separating for ${\mathcal R}(I_\alpha)$, $\alpha \in [0, 2 \pi)$.
\end{theorem}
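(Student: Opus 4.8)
The plan is to transport the cyclic/separating property of the interacting vacuum $\Omega_V$ on ${\mathcal H}_V$ over to ${\mathcal H}$, and then to identify the resulting wedge algebra with ${\mathcal R}(I_\alpha)$. As a first, purely algebraic step I would reduce the statement to cyclicity alone: the double cones ${\mathcal O}_{I_\alpha}$ and ${\mathcal O}_{I_{\alpha+\pi}}$ are the two opposite wedges sharing the edge determined by $I_\alpha$, hence space-like separated, so locality gives ${\mathcal R}(I_{\alpha+\pi}) \subseteq {\mathcal R}(I_\alpha)'$; if $\Omega_{\rm int}$ is cyclic for every ${\mathcal R}(I_\beta)$ it is in particular cyclic for ${\mathcal R}(I_{\alpha+\pi})$, hence for ${\mathcal R}(I_\alpha)'$, and therefore separating for ${\mathcal R}(I_\alpha)$. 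Thus only cyclicity for each ${\mathcal R}(I_\alpha)$ must be shown.

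For that I would first note that $\Omega_V$ is cyclic and separating for
\[
	{\mathcal R}_V(I_\alpha) = \bigvee_{t\in\mathbb{R}} {\rm e}^{-it L_V^{(\alpha)}}\, {\mathcal U}_V(I_\alpha)\, {\rm e}^{it L_V^{(\alpha)}} ,
\]
which is implicit in Theorem~\ref{ToTa11} and is obtained by repeating verbatim the Klein--Landau/Tomita--Takesaki argument of Section~\ref{sec:5.4} with the Gaussian measure replaced by ${\rm d}\mu_V$: if $\Psi \perp {\mathcal R}_V(I_\alpha)\Omega_V$ then $\langle\Psi, A\,{\rm e}^{it L_V^{(\alpha)}} B\,\Omega_V\rangle_V = 0$ for $A,B \in {\mathcal U}_V(I_\alpha)$ and $t \in \mathbb{R}$; by Proposition~\ref{boostanaliticity} the left-hand side extends analytically in $t$, so it vanishes at $t = i\pi$, where ${\rm e}^{-\pi L_V^{(\alpha)}}$ carries ${\mathcal U}_V(I_\alpha)$ onto ${\mathcal U}_V(I_{\alpha+\pi})$, giving $\langle\Psi,{\mathcal V}_V\bigl(A\,{\rm U}^{(\alpha)}(\pi)B\bigr)\rangle_V = 0$; since $\Sigma_{I_\alpha}\vee\Sigma_{I_{\alpha+\pi}} = \Sigma_{S^1}$ and ${\mathcal U}_V(S^1)\Omega_V$ is dense (Klein--Landau), this forces $\Psi = 0$, and cyclicity at the opposite half-circle gives the separating property. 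I would then transport this along the unitary ${\mathbb V}\colon {\mathcal H}_V \to {\mathcal H}$ of Section~\ref{sec6.2}, which by construction satisfies ${\mathbb V}\Omega_V = \Omega_{\rm int}$ (both equal a multiple of ${\mathcal V}$, resp.\ ${\mathcal V}_V$, of the kernel ${\rm e}^{-V(\overline{S_+})}$; cf.\ \eqref{Vhalbkugel}), ${\mathbb V}\,{\mathcal U}_V(I)\,{\mathbb V}^{-1} = {\mathcal U}(I)$ via the Radon--Nikodym identification $L^\infty({\mathcal Q},\Sigma_I,{\rm d}\mu_V) \cong L^\infty({\mathcal Q},\Sigma_I,{\rm d}\Phi_C)$, and ${\mathbb V}\,L_V^{(\alpha)}\,{\mathbb V}^{-1} = H^{(\alpha)}_{\rm os}$, the interacting boost generator of Proposition~\ref{hos-section}. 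Hence $\Omega_{\rm int}$ is cyclic and separating for
\[
	{\mathcal R}_{\rm int}(I_\alpha) \doteq \bigvee_{t\in\mathbb{R}} {\rm e}^{-it H^{(\alpha)}_{\rm os}}\, {\mathcal U}(I_\alpha)\, {\rm e}^{it H^{(\alpha)}_{\rm os}} .
\]

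The remaining, and decisive, point is to identify ${\mathcal R}_{\rm int}(I_\alpha)$ with the \emph{free} wedge algebra ${\mathcal R}(I_\alpha) = \bigvee_{t} {\rm e}^{-it L^{(\alpha)}}\, {\mathcal U}(I_\alpha)\, {\rm e}^{it L^{(\alpha)}}$; once this is in hand the theorem is immediate. This equality expresses the fact that the ${\mathscr P}(\varphi)_2$ interaction leaves the local von Neumann algebras of double cones based on $S^1$ unchanged (cf.\ item iv.) of the abstract): the interacting boost generator is a local perturbation $H^{(\alpha)}_{\rm os} = L^{(\alpha)} + V^{(\alpha)}$ with $V^{(\alpha)}$ affiliated to ${\mathcal U}(I_\alpha) \subset {\mathcal R}(I_\alpha)$, so the Dyson cocycle ${\rm e}^{it L^{(\alpha)}}{\rm e}^{-it H^{(\alpha)}_{\rm os}}$ is built out of the operators ${\rm U}^{(\alpha)}(s) V^{(\alpha)}$, all affiliated to ${\mathcal R}(I_\alpha)$ by finite speed of propagation (Theorem~\ref{fst-theorem}), and symmetrically; hence $\{{\rm e}^{-it H^{(\alpha)}_{\rm os}}\}''\vee{\mathcal U}(I_\alpha) = \{{\rm e}^{-it L^{(\alpha)}}\}''\vee{\mathcal U}(I_\alpha)$. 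Making this rigorous for the unbounded, Wick-ordered $V^{(\alpha)}$ --- controlling the cocycle and the affiliation relations through the $L^p$-bounds on ${\rm e}^{-V}$ --- is the main obstacle; if the coincidence of the interacting and free wedge algebras is already established, the proof is complete, the only further ingredient being the routine analytic-continuation bookkeeping described above.
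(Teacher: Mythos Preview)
Your approach is a valid alternative but differs from the paper's, which works directly in ${\mathcal H}$ without the detour through ${\mathcal H}_V$ and ${\mathbb V}$. The paper uses the Feynman--Kac--Nelson kernels on the \emph{Gaussian} path space: since $F^{(\alpha)}_{[0,\pi]}>0$ almost everywhere, the functions ${\rm U}^{(\alpha)}_{\rm int}(\theta_n)A_n\cdots A_1\,F^{(\alpha)}_{[0,\pi-\theta_1]}$ with $A_i\in L^\infty({\mathcal Q},\Sigma_{I_\alpha},{\rm d}\Phi_C)$ are total in $L^2({\mathcal Q},\Sigma^{(\alpha)}_{[0,\pi]},{\rm d}\Phi_C)$; applying ${\mathcal V}$ and continuing analytically in the $\theta_i$ gives cyclicity of $\Omega_{\rm int}={\rm e}^{-\pi H^{(\alpha)}}\Omega/\|\cdot\|$ for $\bigvee_t{\rm e}^{itH^{(\alpha)}}{\mathcal U}(I_\alpha){\rm e}^{-itH^{(\alpha)}}$. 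This is shorter and more self-contained than your transport argument, which in the paper's logical order would also require Propositions~\ref{prop:12}--\ref{lm12}, set up only afterwards (though a harmless reordering fixes that).

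One correction: ${\mathbb V}$ intertwines $L_V^{(\alpha)}$ with $L_{\rm int}^{(\alpha)}$ (this is the \emph{definition}~\eqref{auchwichtig}), not with $H^{(\alpha)}_{\rm os}$; the two operators differ by $J^{(\alpha)}V^{(\alpha)}J^{(\alpha)}$, which is affiliated to ${\mathcal R}(I_\alpha)'$ by free Tomita--Takesaki theory, so they induce the same automorphism on ${\mathcal U}(I_\alpha)$ and your conclusion survives. Your diagnosis of step~4 is accurate, and the paper's proof shares this feature: what is literally shown is cyclicity for the algebra generated under the $H^{(\alpha)}$-dynamics, and the identification with ${\mathcal R}(I_\alpha)$ is supplied only later, via Trotter and the fact that ${\rm e}^{itV^{(\alpha)}}\in{\mathcal R}(I_\alpha)$ (proof of Theorem~\ref{keyresult3}). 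Neither argument is circular, since that Trotter step uses only the free modular theory (Theorem~\ref{th3.10}) and the affiliation of $V^{(\alpha)}$ to ${\mathcal U}(I_\alpha)$, not the cyclicity of $\Omega_{\rm int}$.
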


\begin{proof}
Since $F^{(\alpha)}_{[0,\pi]}$ is in $L^2({\mathcal Q}, \Sigma_{S^1}, {\rm d} \Phi_C)$, it follows that for
$0\le \theta_1 \le \ldots \le  \theta_n \le  \pi$ the vectors 
	\[
		{\rm U}^{(\alpha)}_{\rm int}(\theta_n) A_n 
		{\rm U}^{(\alpha)}_{\rm int}(\theta_{n-1}- \theta_n) A_{n-1}
			\cdots  {\rm U}^{(\alpha)}_{\rm int} (\theta_1-\theta_2)  
			A_1 F^{(\alpha)}_{[0, \pi -\theta_1]}   \; , 
	\]
with $A_1, \ldots, A_n \in L^{\infty}({\mathcal Q}, \Sigma_{I_\alpha}, {\rm d} \Phi_C)$, are in 
$L^2({\mathcal Q}, \Sigma_{[0, \pi]}, {\rm d} \Phi_C)$. 
Since 
	\[
	F^{(\alpha)}_{[0,\pi]} > 0 \qquad \mu-{\rm a.e.} \; , 
	\]
they form a total set in $L^2 (Q, \Sigma_{[0, \pi]}, {\rm d}\Phi_C)$. Therefore the vectors 
	\begin{align}
		\label{26nn}
		&{\rm e}^{-\theta_{n} H^{(\alpha)} } A_n^{\rm os} 
		{\rm e}^{- (\theta_{n-1}- \theta_n) H^{(\alpha)} } A_{n-1}^{\rm os} 
		\cdots 
		{\rm e}^{-(\theta_{1} - \theta_2) H^{(\alpha)} } A_1^{\rm os} 
		{\rm e}^{-(\pi - \theta_{1}) H^{(\alpha)} }\Omega 
					\nonumber \\ 
		&\qquad \qquad 
		= {\mathcal V} \bigl( {\rm U}^{(\alpha)}_{\rm int}(\theta_n) A_n 
		{\rm U}^{(\alpha)}_{\rm int}(\theta_{n-1}- \theta_n) A_{n-1}
			\cdots 
			\nonumber \\
			& \qquad \qquad \qquad \qquad \qquad \qquad  
			\qquad \cdots {\rm U}^{(\alpha)}_{\rm int} (\theta_1-\theta_2)  
			A_1 F^{(\alpha)}_{[0, \pi -\theta_1]}  \bigr)  
	\end{align}
form a total set in ${\mathcal H}$. By analyticity it follows that the vectors 
	\[
	A_n^{\rm os} (t_n)  
		\cdots 
		A_1^{\rm os} (t_1)
		{\rm e}^{- \pi H^{(\alpha)} }\Omega 
	\]
with $A_i^{\rm os} (t) = {\rm e}^{ i t H^{(\alpha)} } 
A_i^{\rm os} {\rm e}^{- i t H^{(\alpha)} } $, $t \in {\mathbb R}$, 
form a total set in ${\mathcal H}$. 
\end{proof}

We next show that ${\mathcal U} (S^1) \Omega_{\rm int}$
is dense in ${\mathcal H}$ \cite[15.4 Remark]{KL1}.

\begin{theorem} The vector $\Omega_{\rm int}$ is cyclic and separating for ${\mathcal U}(S^1)$.
\end{theorem}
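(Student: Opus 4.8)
The plan is to prove both properties directly from the Osterwalder--Schrader construction on the sphere, using the Markov property (Theorem~\ref{martheo}) together with the explicit description of the conditional expectation of ${\rm e}^{-V(\overline{S_+})}$ onto $\Sigma_{S^1}$ that was just established. The two structural inputs I would rely on are: first, $\Omega_{\rm int}$ is, up to normalisation, ${\mathcal V}\bigl({\rm e}^{-V(\overline{S_+})}\bigr)$ with ${\rm e}^{-V(\overline{S_\pm})}\in\bigcap_{1\le p<\infty}L^{p}({\mathcal Q},\Sigma_{\overline{S_\pm}},{\rm d}\Phi_C)$ (by \cite{SHK} and \eqref{intL1}); second, the Euclidean time reflection $T$ is an isometry of $S^2$ which maps $\overline{S_+}$ onto $\overline{S_-}$ and fixes $S^1$ pointwise, so that (since $C$ is $T$-invariant and hence Wick ordering is preserved) $\Theta\,{\rm e}^{-V(\overline{S_+})}={\rm e}^{-V(\overline{S_-})}$, $\Theta F=F$ for every $\Sigma_{S^1}$-measurable $F$, and $V(\overline{S_+})+V(\overline{S_-})=V(S^2)$ (the overlap $S^1$ having vanishing surface measure).

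For the separating property I would argue as follows. Given $A\in L^{\infty}({\mathcal Q},\Sigma_{S^1},{\rm d}\Phi_C)$ with $A^{\rm os}\Omega_{\rm int}=0$, I compute $\|A^{\rm os}\Omega_{\rm int}\|_{\rm os}^2$, which by \eqref{identi}, \eqref{intvacuum} and the reflection identities above equals a positive constant times $\int_{\mathcal Q}\overline{\Theta(A\,{\rm e}^{-V(\overline{S_+})})}\,A\,{\rm e}^{-V(\overline{S_+})}\,{\rm d}\Phi_C=\int_{\mathcal Q}|A|^2\,{\rm e}^{-V(S^2)}\,{\rm d}\Phi_C$, i.e.\ (recalling \eqref{pmeasure}) a positive multiple of $\int_{\mathcal Q}|A|^2\,{\rm d}\mu_V$. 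Vanishing forces $A=0$ $\mu_V$-a.e., hence $A=0$ in $L^{\infty}({\mathcal Q},\Sigma_{S^1},{\rm d}\Phi_C)$ by mutual absolute continuity of ${\rm d}\mu_V$ and ${\rm d}\Phi_C$, so $A^{\rm os}=0$.

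For the cyclic property I would take $\Psi={\mathcal V}(F)\in{\mathcal H}$ with $F\in L^2({\mathcal Q},\Sigma_{\overline{S_+}},{\rm d}\Phi_C)$ orthogonal to ${\mathcal U}(S^1)\Omega_{\rm int}$ and show $\Psi=0$. Orthogonality reads $\int_{\mathcal Q}\overline{A}\,{\rm e}^{-V(\overline{S_-})}\,F\,{\rm d}\Phi_C=0$ for all $A\in L^{\infty}({\mathcal Q},\Sigma_{S^1},{\rm d}\Phi_C)$; since ${\rm e}^{-V(\overline{S_-})}$ is $\Sigma_{\overline{S_-}}$-measurable and $F$ is $\Sigma_{\overline{S_+}}$-measurable, the Markov property (Theorem~\ref{martheo}~$iv.)$, i.e.\ conditional independence of $\Sigma_{\overline{S_\pm}}$ given $\Sigma_{S^1}$) lets me factor this as $\int_{\mathcal Q}\overline{A}\,{\mathcal E}_{\Sigma_{S^1}}\bigl({\rm e}^{-V(\overline{S_-})}\bigr)\,{\mathcal E}_{\Sigma_{S^1}}(F)\,{\rm d}\Phi_C$. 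By the Conditional Expectation theorem one has ${\mathcal E}_{\Sigma_{S^1}}\bigl({\rm e}^{-V(\overline{S_+})}\bigr)={\rm e}^{-V(S^1)}$, and applying $\Theta$ (which commutes with ${\mathcal E}_{\Sigma_{S^1}}$ by Theorem~\ref{reftheo}~$iii.)$) yields ${\mathcal E}_{\Sigma_{S^1}}\bigl({\rm e}^{-V(\overline{S_-})}\bigr)={\rm e}^{-V(S^1)}$, which is strictly positive $\Phi_C$-a.e. Hence ${\rm e}^{-V(S^1)}\,{\mathcal E}_{\Sigma_{S^1}}(F)\in L^1({\mathcal Q},\Sigma_{S^1},{\rm d}\Phi_C)$ pairs to zero with all of $L^{\infty}({\mathcal Q},\Sigma_{S^1},{\rm d}\Phi_C)$, so it vanishes, giving ${\mathcal E}_{\Sigma_{S^1}}(F)=0$; then $\|\Psi\|_{\rm os}^2=\int_{\mathcal Q}\overline{\Theta(F)}\,F\,{\rm d}\Phi_C=\int_{\mathcal Q}\overline{\Theta({\mathcal E}_{\Sigma_{S^1}}(F))}\,{\mathcal E}_{\Sigma_{S^1}}(F)\,{\rm d}\Phi_C=0$ by the Markov property once more, so $\Psi=0$ and ${\mathcal U}(S^1)\Omega_{\rm int}$ is dense in ${\mathcal H}$.

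The main obstacle I anticipate is bookkeeping rather than conceptual: one must justify that the Fock-space projection $E_{\Sigma_{S^1}}=\Gamma(e(S^1))$ and the canonical map ${\mathcal V}$ may legitimately be identified on $\Sigma_{S^1}$-measurable functions, so that the identity ${\mathcal V}\bigl({\rm e}^{-V(\overline{S_+})}\bigr)={\rm e}^{-V(S^1)}$ can genuinely be read as ${\mathcal E}_{\Sigma_{S^1}}\bigl({\rm e}^{-V(\overline{S_+})}\bigr)={\rm e}^{-V(S^1)}$, and one must check the integrability (via \cite{SHK} and \eqref{intL1}) that makes all the pairings and conditional-independence factorisations above meaningful; once these points are in place the argument is routine.
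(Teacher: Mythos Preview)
Your proposal is correct. The separating argument is exactly the computation the paper records (it appears verbatim in the proof of Proposition~\ref{prop:12}): $\|A^{\rm os}\Omega_{\rm int}\|^2$ equals a positive multiple of $\int |A|^2\,{\rm d}\mu_V$, and mutual absolute continuity of ${\rm d}\mu_V$ and ${\rm d}\Phi_C$ finishes it.

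For cyclicity you take a genuinely different route from the paper. The paper's proof is a one-line operator-algebraic reduction: $\Omega_{\rm int}$ equals (up to normalisation) $B\Omega$ with $B=({\rm e}^{-V(S^1)})^{\rm os}$ strictly positive and affiliated to the abelian algebra $\mathcal{U}(S^1)$, and a cyclic vector for an abelian von Neumann algebra remains cyclic after multiplication by such a $B$ (approximate $B^{-1}$ by bounded elements via spectral cut-offs). Your argument instead stays on the probabilistic side: you pick $\Psi=\mathcal{V}(F)$ orthogonal to $\mathcal{U}(S^1)\Omega_{\rm int}$, use conditional independence of $\Sigma_{\overline{S_+}}$ and $\Sigma_{\overline{S_-}}$ over $\Sigma_{S^1}$ (Theorem~\ref{martheo}) to factor the pairing, conclude $\mathcal{E}_{\Sigma_{S^1}}(F)=0$ from strict positivity of $\mathcal{E}_{\Sigma_{S^1}}({\rm e}^{-V(\overline{S_-})})$, and then invoke the Markov property once more to get $\|\Psi\|_{\rm os}=0$. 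Both rely on the same underlying ingredient (strict positivity of the conditional expectation of ${\rm e}^{-V(\overline{S_\pm})}$), but the paper packages it algebraically while you unpack it measure-theoretically. One small advantage of your route is that you never need the explicit identification $\mathcal{E}_{\Sigma_{S^1}}({\rm e}^{-V(\overline{S_+})})={\rm e}^{-V(S^1)}$ from the Conditional Expectation theorem; the general fact that conditional expectations of a.e.\ strictly positive $L^1$ functions are a.e.\ strictly positive suffices. The bookkeeping obstacle you flag (integrability to justify the factorisation $\mathcal{E}_{\Sigma_{S^1}}(GH)=\mathcal{E}_{\Sigma_{S^1}}(G)\mathcal{E}_{\Sigma_{S^1}}(H)$, and the identification of $\mathcal{V}$ with $\mathcal{E}_{\Sigma_{S^1}}$ on $\Sigma_{\overline{S_+}}$-measurable functions) is real but routine given \eqref{intL1} and Theorem~\ref{martheo}.
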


\begin{proof} Recall from \eqref{intvacuum}  that 
	\[
	\Omega_{\rm int} = 	\frac{ {\mathcal V}  \bigl({\rm e}^{- V \left(\overline{S_+}\right)} \bigr) }
		{ \left\|{\mathcal V}  \bigl({\rm e}^{- V \left(\overline{S_+}\right)} \bigr) \right\| }  \; . 
	\]
Since $\Omega$ is cyclic for ${\mathcal U}(S^1)$, and  ${\mathcal U}(S^1)$ is abelian, the result follows
from the fact that ${\mathcal V}  \bigl({\rm e}^{- V \left(\overline{S_+}\right)} \bigr)$ 
is affiliated to  ${\mathcal U}(S^1)$ and strictly positive.
\end{proof}

\begin{proposition}
\label{prop:12} 
The map 
\label{qtotot-page}
	\begin{equation}
		\label{qtotot}
		A^{V}  \Omega_V 
			\mapsto A^{\rm os} \Omega_{\rm int} \; , 
			\qquad A \in L^{\infty}({\mathcal Q}, \Sigma_{S^1} , {\rm d} \mu_V) \; ,
	\end{equation}
extends to a unitary operator ${\mathbb V} \colon {\mathcal H}_V \to {\mathcal H}$. 
\end{proposition}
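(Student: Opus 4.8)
The plan is to exhibit $\mathbb{V}$ as the closure of a densely defined isometry between cyclic subspaces. First I would record the three identifications that turn \eqref{qtotot} into a map between abelian algebras: by Lemma \ref{uiso} (and its evident interacting analogue, using that $\Omega_V$ is cyclic for $\mathcal{U}_V(S^1)$) the maps $A\mapsto A^{\rm os}$ and $A\mapsto A^V$ are $*$-isomorphisms of $L^\infty({\mathcal Q},\Sigma_{S^1},{\rm d}\Phi_C)$ onto ${\mathcal U}(S^1)$ and of $L^\infty({\mathcal Q},\Sigma_{S^1},{\rm d}\mu_V)$ onto ${\mathcal U}_V(S^1)$, respectively, while \eqref{jss} identifies the two $L^\infty$-spaces (since ${\rm d}\mu_V$ and ${\rm d}\Phi_C$ are mutually absolutely continuous). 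Composing, \eqref{qtotot} is the map $A^V\Omega_V\mapsto A^{\rm os}\Omega_{\rm int}$ with $A$ a single $\Sigma_{S^1}$-measurable bounded function. Since ${\mathcal U}_V(S^1)\Omega_V$ is dense in ${\mathcal H}_V$ (Theorem 11.2 of \cite{KL1}) and ${\mathcal U}(S^1)\Omega_{\rm int}$ is dense in ${\mathcal H}$ (the theorem just proved, that $\Omega_{\rm int}$ is cyclic for ${\mathcal U}(S^1)$), it suffices to check that this correspondence preserves inner products on these dense subspaces; well-definedness then follows from the resulting isometry identity, and the closure is the desired unitary $\mathbb{V}$.

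\textbf{The isometry computation.} The main step is to verify, for $A,B\in L^\infty({\mathcal Q},\Sigma_{S^1},{\rm d}\mu_V)$,
\[
\langle A^V\Omega_V,\,B^V\Omega_V\rangle_{{\mathcal H}_V}=\langle A^{\rm os}\Omega_{\rm int},\,B^{\rm os}\Omega_{\rm int}\rangle_{{\mathcal H}}\,.
\]
For the left side I use \eqref{osspv} and that $\Theta=\Gamma(T_*)$ acts trivially on $\Sigma_{S^1}$-measurable functions (the reflection $T$ of \eqref{deftimerefl} fixes $S^1$, and $H^{-1}_{\upharpoonright S^1}(S^2)$ contains only tangential, i.e. $\delta\otimes h$-type, distributions, on which $T_*$ is the identity), so $\overline{\Theta(A)}=\overline{A}$ and
\[
\langle A^V\Omega_V,B^V\Omega_V\rangle_{{\mathcal H}_V}=\int_{{\mathcal Q}}{\rm d}\mu_V\;\overline{A}\,B
=\frac1Z\int_{{\mathcal Q}}{\rm d}\Phi_C\;{\rm e}^{-V(S^2)}\,\overline{A}\,B\,,
\]
with $Z=\int_{{\mathcal Q}}{\rm d}\Phi_C\,{\rm e}^{-V(S^2)}$. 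For the right side I use $\Omega_{\rm int}={\mathcal V}({\rm e}^{-V(\overline{S_+})})/\sqrt{Z}$ — here $\|{\mathcal V}({\rm e}^{-V(\overline{S_+})})\|^2=\int{\rm d}\Phi_C\,\overline{\Theta({\rm e}^{-V(\overline{S_+})})}\,{\rm e}^{-V(\overline{S_+})}=\int{\rm d}\Phi_C\,{\rm e}^{-V(\overline{S_-})-V(\overline{S_+})}=Z$ since $\Theta$ maps $\overline{S_+}$-integrals to $\overline{S_-}$-integrals ($T$ preserves ${\rm d}\Omega$ and the $O(3)$-invariant covariance $C$) and $V(S^2)=V(\overline{S_+})+V(\overline{S_-})$ up to the null set $S^1$. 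Then, using \eqref{identi}, multiplicativity of $\Theta$, $\Theta({\rm e}^{-V(\overline{S_+})})={\rm e}^{-V(\overline{S_-})}$ and again $\overline{\Theta(A)}=\overline{A}$,
\[
\langle A^{\rm os}\Omega_{\rm int},B^{\rm os}\Omega_{\rm int}\rangle_{{\mathcal H}}
=\frac1Z\int_{{\mathcal Q}}{\rm d}\Phi_C\;\overline{\Theta\bigl(A\,{\rm e}^{-V(\overline{S_+})}\bigr)}\,B\,{\rm e}^{-V(\overline{S_+})}
=\frac1Z\int_{{\mathcal Q}}{\rm d}\Phi_C\;{\rm e}^{-V(S^2)}\,\overline{A}\,B\,,
\]
which agrees with the left side.

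\textbf{Conclusion and main obstacle.} The identity of quadratic forms shows the correspondence $A^V\Omega_V\mapsto A^{\rm os}\Omega_{\rm int}$ is well defined (if $A^V\Omega_V=0$ the right side vanishes too) and isometric on the dense subspace ${\mathcal U}_V(S^1)\Omega_V\subset{\mathcal H}_V$, with dense range ${\mathcal U}(S^1)\Omega_{\rm int}\subset{\mathcal H}$; hence it extends uniquely to a unitary $\mathbb{V}\colon{\mathcal H}_V\to{\mathcal H}$, which is \eqref{qtotot}. I expect the only delicate points to be the bookkeeping that $\Theta$ genuinely acts as the identity on all of $L^\infty({\mathcal Q},\Sigma_{S^1},{\rm d}\Phi_C)$ — i.e.\ that $H^{-1}_{\upharpoonright S^1}(S^2)$ contains no normal-derivative directions that $T_*$ would flip sign on — and the measure-theoretic identities $V(S^2)=V(\overline{S_+})+V(\overline{S_-})$ and $\Theta\,{\rm e}^{-V(\overline{S_+})}={\rm e}^{-V(\overline{S_-})}$, which rest on $T$-invariance of ${\rm d}\Omega$ and of $C$ together with the integrability statement \eqref{intL1}; all of these are either immediate or follow from results already established.
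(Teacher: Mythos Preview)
Your proof is correct and follows essentially the same approach as the paper: density of ${\mathcal U}_V(S^1)\Omega_V$ and ${\mathcal U}(S^1)\Omega_{\rm int}$, followed by the norm/inner-product computation via the definition of ${\rm d}\mu_V$ in terms of ${\rm e}^{-V(S^2)}{\rm d}\Phi_C$. The paper's version is terser---it simply writes $\|A^V\Omega_V\|^2=\int{\rm d}\mu_V\,|A|^2=Z^{-1}\int{\rm d}\Phi_C\,{\rm e}^{-V(S^2)}|A|^2=\|A^{\rm os}\Omega_{\rm int}\|^2$---while you spell out the $\Theta$-invariance on $\Sigma_{S^1}$, the normalization $\|{\mathcal V}({\rm e}^{-V(\overline{S_+})})\|^2=Z$, and the splitting $V(S^2)=V(\overline{S_+})+V(\overline{S_-})$ that are implicit in that chain.
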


\begin{proof} We have already seen that ${\mathcal U}_V (S^1) \Omega_V$ is dense in ${\mathcal H}_V$ 
and  ${\mathcal U} (S^1) \Omega_{\rm int}$ is dense in ${\mathcal H}$.
It remains to show that the map \eqref{qtotot} is norm preserving. This follows from 
	\begin{align*}
		\| A^{V}  \Omega_V \|^2 
		&= \int {\rm d} \mu_V |A|^2 =
		\frac{\int_{{\mathcal Q}}{\rm d}\Phi_{C} \; 
		{\rm e}^{- V (S^2) } \; |A|^2}{\int_{{\mathcal Q}}{\rm d}\Phi_{C} \; {\rm e}^{- V (S^2) }} 
		 \\
		&= 
		\| A^{\rm os} \Omega_{\rm int}\|^2 \; , \qquad A \in L^{\infty}({\mathcal Q}, 
		\Sigma_{S^1} , {\rm d} \mu_V) \; .
	\end{align*}
The second indenity follows from the definition of the interacting measure ${\rm d} \mu_V$, see~\eqref{pmeasure}.
\end{proof}

\begin{proposition}
\label{lm12}
The map  ${\mathbb V} \colon {\mathcal H}_V \to {\mathcal H}$ introduced in Proposition \ref{prop:12}
\begin{itemize} 
\item[$ i.)$] intertwines ${\mathcal U}_V (S^1)$ and ${\mathcal U} (S^1)$;
\item[$ ii.)$] respects the local structure, \emph{i.e.},  
	\[
	{\mathcal U} (I) = {\mathbb V} \, {\mathcal U}_V (I ) \, {\mathbb V}^{-1} \; , 
	\qquad I \subset S^1 \; ; 
	\]
\item[$ iii.)$] and, intertwines $K_0^V$ and $K_0$, \emph{i.e.},  ${\mathbb V}  K_0^V {\mathbb V} ^{-1}= K_0$.
\end{itemize}
\end{proposition}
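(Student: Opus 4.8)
The plan is to verify the three claims about the unitary $\mathbb{V}$ by exploiting its defining property $\mathbb{V} A^V \Omega_V = A^{\rm os} \Omega_{\rm int}$ together with density of $\mathcal{U}_V(S^1)\Omega_V$ in $\mathcal{H}_V$ and of $\mathcal{U}(S^1)\Omega_{\rm int}$ in $\mathcal{H}$ (both established just above). Since all three algebras in sight are abelian and both distinguished vectors are cyclic for them, the essential point is that $\mathbb{V}$ conjugates the concrete multiplication action $A \mapsto A^V$ into $A\mapsto A^{\rm os}$ under the identification \eqref{jss} of $L^\infty(\mathcal{Q},\Sigma_{S^1},{\rm d}\mu_V)$ with $L^\infty(\mathcal{Q},\Sigma_{S^1},{\rm d}\Phi_C)$.

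First, for $i.)$, let $A, B \in L^\infty(\mathcal{Q}, \Sigma_{S^1}, {\rm d}\mu_V)$. Using \eqref{qtotot} and the fact that $(AB)^V = A^V B^V$ and $(AB)^{\rm os} = A^{\rm os} B^{\rm os}$ (both follow from \eqref{qqqq} resp.\ \eqref{identi} and the $\Sigma_{S^1}$-measurability of $A$, which guarantees multiplication by $A$ preserves $\mathcal{N}_V$ resp.\ $\mathcal{N}$), one computes
\[
\mathbb{V} A^V \bigl( B^V \Omega_V \bigr) = \mathbb{V} (AB)^V \Omega_V = (AB)^{\rm os} \Omega_{\rm int} = A^{\rm os} \bigl( B^{\rm os}\Omega_{\rm int} \bigr) = A^{\rm os} \mathbb{V} \bigl( B^V \Omega_V \bigr) .
\]
Since vectors of the form $B^V\Omega_V$ are dense in $\mathcal{H}_V$, this gives $\mathbb{V} A^V = A^{\rm os} \mathbb{V}$ on a core, hence $\mathbb{V} A^V \mathbb{V}^{-1} = A^{\rm os}$ for all such $A$, which is exactly the intertwining of $\mathcal{U}_V(S^1)$ with $\mathcal{U}(S^1)$ (the weak-continuity and $^*$-isomorphism properties being supplied by Lemma \ref{uiso} and its interacting analogue). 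Claim $ii.)$ is then immediate: restricting $A$ to lie in $L^\infty(\mathcal{Q},\Sigma_I,{\rm d}\mu_V)$ for an open interval $I\subset S^1$ and using that $\mathcal{U}_V(I)$ (resp.\ $\mathcal{U}(I)$) is by definition the weak closure of multiplications by such $A$, the intertwining relation from $i.)$ restricts to $\mathbb{V}\,\mathcal{U}_V(I)\,\mathbb{V}^{-1} = \mathcal{U}(I)$.

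For $iii.)$, I would use the explicit Euclidean formulas for the two rotation generators, namely \eqref{knull-V-prop}/\eqref{knullos} for $K_0^V$ on $\mathcal{H}_V$ and \eqref{knullos-2} (Proposition \ref{knullosprop}) for $K_0^{\rm os}$ on $\mathcal{H}$, together with the fact that the unitary group $\theta \mapsto {\rm U}(R_0(\theta))$ of rotations about the $x_0$-axis preserves the time-zero $\sigma$-algebra $\Sigma_{S^1}$ and preserves \emph{both} measures ${\rm d}\Phi_C$ and ${\rm d}\mu_V$ (the interaction $V(S^2) = \int_{S^2}{:}\mathscr{P}(\Phi){:}_C$ being rotation invariant, since $R^{(\cdot)}$ rotations about $x_0$ leave $S^2$ and the covariance $C$ invariant, hence leave ${\rm e}^{-V(S^2)}{\rm d}\Phi_C$ invariant up to the normalisation). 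Consequently ${\rm U}(R_0(\alpha))$ preserves $\mathcal{N}_V$ and $\mathcal{N}$ and commutes with $\Theta$. Then for $A\in L^\infty(\mathcal{Q},\Sigma_{S^1},{\rm d}\mu_V)$,
\[
\mathbb{V}\, {\rm e}^{i\alpha K_0^V} A^V \Omega_V = \mathbb{V}\,\mathcal{V}_V\bigl({\rm U}(R_0(\alpha))A\bigr) = \bigl({\rm U}(R_0(\alpha))A\bigr)^{\rm os}\Omega_{\rm int} = {\rm e}^{i\alpha K_0^{\rm os}} A^{\rm os}\Omega_{\rm int} = {\rm e}^{i\alpha K_0^{\rm os}}\,\mathbb{V}\, A^V \Omega_V ,
\]
where the third equality requires knowing that $\Omega_{\rm int}$ is rotation invariant; this follows because ${\mathcal V}({\rm e}^{-V(\overline{S_+})})$ is affiliated to $\mathcal{U}(S^1)$ and the image of $\Omega = \mathcal{V}(1)$ under ${\rm e}^{i\alpha K_0^{\rm os}}$ fixes $\Omega$ while commuting with the rotation-invariant function $V(S^1)$ (using the conditional-expectation identity $\mathcal{V}({\rm e}^{-V(\overline{S_+})}) = {\rm e}^{-V(S^1)}$ from the preceding chapter). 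Density of $\mathcal{U}_V(S^1)\Omega_V$ then upgrades this to ${\mathbb V} K_0^V {\mathbb V}^{-1} = K_0^{\rm os}$, and since $K_0^{\rm os}$ is (via Proposition \ref{propfreeboost2}) the generator $K_0$ on the free Fock space $\Gamma(\widehat{\mathfrak h}(S^1))$, we obtain $\mathbb{V} K_0^V \mathbb{V}^{-1} = K_0$.

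The main obstacle I anticipate is the rotation invariance of $\Omega_{\rm int}$ entering $iii.)$: one must be careful that the normalisation of the interacting measure is rotation invariant and that the reconstruction map $\mathcal{V}$ genuinely intertwines ${\rm U}(R_0(\alpha))$ on $L^2(\mathcal{Q},\Sigma_{\overline{S_+}},{\rm d}\Phi_C)$ with ${\rm e}^{i\alpha K_0^{\rm os}}$ on $\mathcal{H}$ — this is where one invokes that ${\rm U}(R_0(\alpha))$ preserves $\overline{S_+}$ (rotations about the $x_0$-axis do) and commutes with $\Theta$, so no local-symmetric-semigroup subtleties arise, unlike for the boosts. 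Everything else is a routine unfolding of the definitions.
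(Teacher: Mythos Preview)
The paper states Proposition~\ref{lm12} without proof, treating it as an immediate consequence of the construction of $\mathbb{V}$ in Proposition~\ref{prop:12}. Your argument is correct and supplies exactly the details the paper omits: parts $i.)$ and $ii.)$ follow directly from $\mathbb{V}A^V\Omega_V=A^{\rm os}\Omega_{\rm int}$, the multiplicativity $(AB)^V=A^VB^V$, $(AB)^{\rm os}=A^{\rm os}B^{\rm os}$, and density of $\mathcal{U}_V(S^1)\Omega_V$; part $iii.)$ reduces, as you correctly isolate, to the rotation invariance of $\Omega_{\rm int}$.

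One simplification for $iii.)$: you need not invoke the conditional-expectation identity $\mathcal{V}({\rm e}^{-V(\overline{S_+})})={\rm e}^{-V(S^1)}$. Since the rotations $R_0(\alpha)$ about the $x_0$-axis preserve the closed upper hemisphere $\overline{S_+}$ and commute with $\Theta$, the Euclidean action ${\rm U}(R_0(\alpha))$ descends through $\mathcal{V}$ to all of $\mathcal{H}$ (not just to $\mathcal{U}(S^1)\Omega$), and this descended map coincides with ${\rm e}^{i\alpha K_0^{\rm os}}$ because the two agree on the dense set $\mathcal{U}(S^1)\Omega$. Then ${\rm e}^{i\alpha K_0^{\rm os}}\Omega_{\rm int}=\mathcal{V}\bigl({\rm U}(R_0(\alpha)){\rm e}^{-V(\overline{S_+})}\bigr)/\|\cdot\|=\Omega_{\rm int}$ follows immediately from the $R_0$-invariance of $V(\overline{S_+})=\int_{\overline{S_+}}{:}\mathscr{P}(\Phi){:}_C$, which is manifest since $R_0(\alpha)$ maps $\overline{S_+}$ to itself and preserves the covariance $C$. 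This is slightly more direct than routing through the time-zero formula for $\Omega_{\rm int}$.
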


\begin{theorem} 
\label{uovos}
Set 
\label{unitrospage}
	\begin{equation}
		\label{auchwichtig}
 		L_{\rm int}^{(\alpha)} \doteq {\mathbb V}  \, L_V^{(\alpha)} \, {\mathbb V} ^{-1} \; . 
	\end{equation}
It follows that the generators $K_0 $, $L^{\rm int}_1 \doteq   L_{\rm int}^{(0)}$ and $L^{\rm int}_2 
\doteq   L_{\rm int}^{(\pi/2)}$ 
generate a unitary representation $\widehat {U}_{\rm \, int}$ of $SO_0(1,2)$ 
on the Hilbert space ${\mathcal H}$.
\end{theorem}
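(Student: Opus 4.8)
The plan is to transport all the relevant structure from the interacting Hilbert space~${\mathcal H}_V$ to~${\mathcal H}$ via the unitary ${\mathbb V}$ constructed in Proposition~\ref{prop:12}, and then quote Theorem~\ref{uo}. Concretely, Theorem~\ref{uo} asserts that $K_0^V$, $L_1^V \doteq L^{(0)}_V$ and $L_2^V \doteq L^{(\pi/2)}_V$ generate a unitary representation $\Lambda \mapsto U_V(\Lambda)$ of $SO_0(1,2)$ on~${\mathcal H}_V$. Since ${\mathbb V} \colon {\mathcal H}_V \to {\mathcal H}$ is unitary, the operators ${\mathbb V} K_0^V {\mathbb V}^{-1}$, ${\mathbb V} L_1^V {\mathbb V}^{-1}$, ${\mathbb V} L_2^V {\mathbb V}^{-1}$ are self-adjoint on~${\mathcal H}$, and $\Lambda \mapsto {\mathbb V} U_V(\Lambda) {\mathbb V}^{-1}$ is automatically a strongly continuous unitary representation of $SO_0(1,2)$ on~${\mathcal H}$ whose infinitesimal generators along the three one-parameter subgroups are exactly these conjugated operators. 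It then only remains to match names: by Proposition~\ref{lm12}~$iii.)$ we have ${\mathbb V} K_0^V {\mathbb V}^{-1} = K_0$, and by the definition~\eqref{auchwichtig} we have ${\mathbb V} L_V^{(\alpha)} {\mathbb V}^{-1} = L_{\rm int}^{(\alpha)}$, hence in particular ${\mathbb V} L_1^V {\mathbb V}^{-1} = L_1^{\rm int}$ and ${\mathbb V} L_2^V {\mathbb V}^{-1} = L_2^{\rm int}$. Setting $\widehat{U}_{\rm int}(\Lambda) \doteq {\mathbb V} U_V(\Lambda) {\mathbb V}^{-1}$ gives the desired representation.

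The steps I would carry out, in order: first, recall from Theorem~\ref{uo} and its proof (via virtual representations, Theorem~\ref{FOS}) the precise sense in which $U_V$ is generated by the triple $(K_0^V, L_1^V, L_2^V)$ — in particular that the commutation relations of $\mathfrak{so}(1,2)$ hold on a common dense invariant domain and that the resulting local representation integrates globally. Second, observe that conjugation by a fixed unitary is a $*$-isomorphism of $\mathcal{B}(\mathcal{H}_V)$ onto $\mathcal{B}(\mathcal{H})$ preserving self-adjointness, strong continuity, spectral calculus, and hence Lie-algebraic relations among unbounded generators on the transported domain ${\mathbb V}\, \mathfrak{D}$. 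Third, invoke Proposition~\ref{lm12}~$iii.)$ and the definition~\eqref{auchwichtig} to identify the conjugated generators with $K_0$, $L_1^{\rm int}$, $L_2^{\rm int}$. Fourth, conclude that $\widehat{U}_{\rm int} \doteq {\mathbb V} U_V(\cdot) {\mathbb V}^{-1}$ is a unitary representation of $SO_0(1,2)$ on~${\mathcal H}$ with the stated generators.

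I do not anticipate a genuine obstacle here: the entire content has been front-loaded into Theorem~\ref{uo} (the hard analytic input, namely that the Euclidean rotation semigroups analytically continue to a unitary representation of the Lorentz group, is exactly the virtual-representation argument already carried out on~${\mathcal H}_V$) and into Proposition~\ref{prop:12}/Proposition~\ref{lm12} (the unitary identification ${\mathbb V}$ and its intertwining properties). The only point requiring minor care is the bookkeeping of domains: one should note that ${\mathbb V}$ maps the dense invariant domain on which the $\mathfrak{so}(1,2)$-relations hold in~${\mathcal H}_V$ onto a corresponding dense invariant domain in~${\mathcal H}$, so that the transported operators really do generate the transported representation rather than merely being self-adjoint. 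This is immediate from unitarity of ${\mathbb V}$ together with the fact that $U_V(\Lambda) = {\mathbb V}^{-1} \widehat{U}_{\rm int}(\Lambda) {\mathbb V}$, and needs no new estimate. Thus the proof is essentially a one-line transport argument, and I would present it as such.

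\begin{proof}
By Theorem~\ref{uo}, the self-adjoint operators $K_0^V$, $L_1^V \doteq L_V^{(0)}$ and $L_2^V \doteq L_V^{(\pi/2)}$ on~${\mathcal H}_V$ generate a unitary representation $\Lambda \mapsto U_V(\Lambda)$ of $SO_0(1,2)$ on~${\mathcal H}_V$. Let ${\mathbb V} \colon {\mathcal H}_V \to {\mathcal H}$ be the unitary operator of Proposition~\ref{prop:12}, and set
	\[
		\widehat{U}_{\rm int}(\Lambda) \doteq {\mathbb V} \, U_V(\Lambda) \, {\mathbb V}^{-1} \; ,
		\qquad \Lambda \in SO_0(1,2) \; .
	\]
Since ${\mathbb V}$ is unitary, $\Lambda \mapsto \widehat{U}_{\rm int}(\Lambda)$ is a strongly continuous unitary representation of $SO_0(1,2)$ on~${\mathcal H}$. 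Conjugation by ${\mathbb V}$ is a $*$-isomorphism which preserves self-adjointness, strong continuity of one-parameter groups and the spectral calculus; hence the generators of $\widehat{U}_{\rm int}$ along the subgroups $t \mapsto \Lambda_1(t)$, $s \mapsto \Lambda_2(s)$ and $\alpha \mapsto R_0(\alpha)$ are the operators ${\mathbb V} L_V^{(0)} {\mathbb V}^{-1}$, ${\mathbb V} L_V^{(\pi/2)} {\mathbb V}^{-1}$ and ${\mathbb V} K_0^V {\mathbb V}^{-1}$, respectively, and these satisfy the commutation relations of $\mathfrak{so}(1,2)$ on the image under ${\mathbb V}$ of the common invariant dense domain used in the proof of Theorem~\ref{uo}. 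By Proposition~\ref{lm12}~$iii.)$ one has ${\mathbb V} K_0^V {\mathbb V}^{-1} = K_0$, and by the definition \eqref{auchwichtig} one has ${\mathbb V} L_V^{(\alpha)} {\mathbb V}^{-1} = L_{\rm int}^{(\alpha)}$, so in particular ${\mathbb V} L_V^{(0)} {\mathbb V}^{-1} = L_1^{\rm int}$ and ${\mathbb V} L_V^{(\pi/2)} {\mathbb V}^{-1} = L_2^{\rm int}$. Thus $K_0$, $L_1^{\rm int}$ and $L_2^{\rm int}$ generate the unitary representation $\widehat{U}_{\rm int}$ of $SO_0(1,2)$ on~${\mathcal H}$.
\end{proof}
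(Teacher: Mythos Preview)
Your proof is correct and matches the paper's approach: the paper states Theorem~\ref{uovos} without writing out a proof precisely because it is meant to follow immediately from Theorem~\ref{uo} by conjugating with the unitary~${\mathbb V}$ of Proposition~\ref{prop:12} and invoking Proposition~\ref{lm12}~$iii.)$ to identify the rotation generator. Your transport argument makes this explicit and is exactly what is intended.
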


We end this section by defining the interacting automorphisms. 

\begin{definition}
\label{def:6}
Given the unitary representation $\widehat {U}_{\rm \, int}$ of $SO_0(1,2)$ on the 
Hilbert space ${\mathcal H}$, we define the corresponding automorphisms: set 
\label{autintseinspage}
	\[
	\widehat \alpha_{\Lambda}^{\rm \, int} ( A) =
	\widehat {U}_{\rm \, int} (\Lambda) \, A \, \widehat {U}_{\rm \, int} (\Lambda)^{-1} \; , 
	\qquad
	A \in {\mathcal B} ({\mathcal H}) \; , \quad \Lambda \in SO_0(1,2) \; .  	
	\]
We call this group of automorphisms the {\em interacting dynamics} on the Cauchy surface $S^1$.
\end{definition}
 

\section{Perturbation theory for modular automorphisms}
\label{arpermodaut}
Next recall Araki's perturbation theory for modular automorphisms~\cite{A2}\cite{A3}, 
which has been generalised to unbounded perturbations by Derezinski, Jaksic and Pillet~\cite{DJP}.

\begin{theorem} 
\label{th5.2} Set
	\begin{equation}
		\label{halberkreis}
			V^{(\alpha)} 
			=  \int_{I_\alpha}  r {\rm d} \psi \; \cos (\psi + \alpha) \, 
			{:}{\mathscr P}(\Phi^{\rm os} (0, \psi)) {:}_{C_{0}}   \: .
	\end{equation}
It follows that
\begin{itemize}
\item[$ i.)$]
the operator sum $L^{(\alpha)} +V^{(\alpha)}$ is essentially self-adjoint on 
${\mathscr D} \bigl(L^{(\alpha)} \bigr) \cap {\mathscr D} \bigl(V^{(\alpha)} \bigr)$ and 
	\begin{equation}
	\label{Araki1}
			\overline{L^{(\alpha)}+V^{(\alpha)}} 
		= H^{(\alpha)} \; , \qquad \alpha \in [ \, 0, 2 \pi  )\; ;
	\end{equation}
\item[$ ii.)$]
the vector $\Omega $ defined in Equ.~(\ref{dv}) belongs to
${\mathscr D}\bigl({\rm e}^{-\pi H^{(\alpha)}}\bigr)$, $\alpha \in [ \, 0, 2 \pi )$, and 
	\begin{equation}
		\label{Araki}
		\frac{{\rm e}^{-\pi H^{(\alpha)}}\Omega }{ 
		\|{\rm e}^{-\pi H^{(\alpha)}}\Omega \| } = \Omega_{\rm int} \; , 
		\qquad   \alpha \in [ \, 0, 2\pi ) \; .  
	\end{equation}
The vector $\Omega_{\rm int}$ was defined in Equ.~\eqref{intvacuum};
\item[$ iii.)$]
the vector $\Omega_{\rm int}$ satisfies the Peierls-Bogoliubov  and the Golden-Thompson inequalities:
	\[
		{\rm e}^{- {\pi} (\Omega , V^{(\alpha)} \Omega)   } 
		\le  \| {\rm e}^{-\pi H^{(\alpha)}}\Omega \|  
		\le  \|{\rm e}^{- {\pi} V^{(\alpha)}   } \Omega \| \; ;
	\]
\item[$ iv.)$]
the operator $H^{(\alpha)}- J^{(\alpha)} V^{(\alpha)} J^{(\alpha)}$ is essentially self-adjoint 
on the domain 
	\[
		{\mathscr D} \bigl(H^{(\alpha)} \bigr) \cap {\mathscr D} \bigl(J^{(\alpha)} V^{(\alpha)} J^{(\alpha)} \bigr)
	\]
and the closure equals $L^{(\alpha)}_{\rm int} $, 
	\[
		\overline{H^{(\alpha)}- J^{(\alpha)} V^{(\alpha)} 
		J^{(\alpha)}} = L^{(\alpha)}_{\rm int} \; ,  \qquad \alpha \in  [ \, 0, 2\pi ) \; , 
	\]
where $L^{(\alpha)}_{\rm int}$ is defined in Equ.~\eqref{auchwichtig}.
Moreover, $L^{(\alpha)}_{\rm int} \Omega_{\rm int}= 0$; 
\item[$ v.)$]
the conjugation $J^{(\alpha)}$ defined in (\ref{est1.2})  satisfies
	\begin{equation}
		\label{squaresquare}
		 J^{(\alpha)}  {\rm e}^{-\pi L_{\rm int}^{(\alpha)}} A \Omega_{\rm int}
		 = A^*  \Omega_{\rm int}  \qquad 
 		\forall A \in {\mathcal R} \left(I_\alpha\right)\; .
 	\end{equation}
Thus $J^{(\alpha)}$ is the modular conjugation for the pair 
$\bigl({\mathcal R}\left(I_\alpha\right), \Omega_{\rm int} \bigr)$. 
Note that Equ.~\eqref{squaresquare} implies 
${\mathbb V}  J^{(\alpha)}_{V}  {\mathbb V} ^{-1}= J^{(\alpha)}$,  
with $J^{(\alpha)}_{V}$ defined in (\ref{wichtig3}).
\end{itemize}
\end{theorem}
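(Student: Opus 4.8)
\emph{Strategy.} All five assertions will be obtained by applying Araki's perturbation theory of modular automorphisms \cite{A2,A3}, in the unbounded form of Derezinski, Jaksic and Pillet \cite{DJP}, to the free modular data $\bigl({\rm e}^{-\pi L^{(\alpha)}},J^{(\alpha)}\bigr)$ of the pair $\bigl({\mathcal R}(I_\alpha),\Omega\bigr)$ supplied by Theorem~\ref{th3.10}, perturbed by the self-adjoint operator $V^{(\alpha)}$ of \eqref{halberkreis}, which by \eqref{identi}--\eqref{hos} and Lemma~\ref{wickooo} is affiliated to the abelian subalgebra ${\mathcal U}(I_\alpha)\subset{\mathcal R}(I_\alpha)$. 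Throughout I would freely use the identification \eqref{Vhalbkugel} and the unitary ${\mathbb V}\colon{\mathcal H}_V\to{\mathcal H}$ of Proposition~\ref{prop:12}, which by Proposition~\ref{lm12} and Theorem~\ref{uovos} intertwines ${\mathcal U}_V(I)$ with ${\mathcal U}(I)$, $\Omega_V$ with $\Omega_{\rm int}$, and $L^{(\alpha)}_V$ with $L^{(\alpha)}_{\rm int}$.

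\emph{Parts i), ii), iii).} First I would record the input properties of $V^{(\alpha)}$: by Lemma~\ref{wickooo} and the $P(\varphi)_2$ lower bounds of \cite{SHK} (cf.\ \cite{S}), $V^{(\alpha)}\in L^1({\mathcal Q},\Sigma_{I_\alpha},{\rm d}\Phi_C)$, ${\rm e}^{-sV^{(\alpha)}}\in L^1$ for all $s>0$, and the $N_\tau$-type estimates hold. Since $L^{(\alpha)}={\rm d}\Gamma(\omega r\,\cos_{\psi+\alpha})$ (Proposition~\ref{propfreeboost}) generates on the $I_\alpha$-piece a hypercontractive semigroup, the standard constructive argument (Nelson, Glimm--Jaffe; cf.\ \cite{S}) shows that $L^{(\alpha)}+V^{(\alpha)}$ is essentially self-adjoint on ${\mathscr D}(L^{(\alpha)})\cap{\mathscr D}(V^{(\alpha)})$; that its closure equals the generator $H^{(\alpha)}$ of the local symmetric semigroup $\bigl(P^{(\alpha)}_{\rm int}(\theta),{\mathscr D}^{(\alpha)}_{\theta,{\rm int}}\bigr)$ of Proposition~\ref{hos-section} follows from the Trotter product formula applied to the FKN kernels $F^{(\alpha)}_{[0,\theta]}$, exactly the computation already carried out around \eqref{Vhalbkugel}; this is i). For ii), ${\rm e}^{-\pi L^{(\alpha)}}\Omega=\Omega$ and $\int_0^\pi{\rm U}^{(\alpha)}(\theta)V^{(\alpha)}\,{\rm d}\theta=V(\overline{S_+})$ by Lemma~\ref{1.0b} and Lemma~\ref{2.2}, so ${\rm e}^{-\pi H^{(\alpha)}}\Omega={\mathcal V}\bigl({\rm e}^{-V(\overline{S_+})}\bigr)$ and \eqref{Araki} is the definition \eqref{intvacuum} of $\Omega_{\rm int}$. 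The Peierls--Bogoliubov and Golden--Thompson bounds of iii) then follow from Jensen's inequality and the Trotter formula as in \cite{A2}, using $(\Omega,V^{(\alpha)}\Omega)<\infty$ and ${\rm e}^{-\pi V^{(\alpha)}}\Omega\in{\mathcal H}$.

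\emph{Parts iv) and v).} By Theorem~\ref{ToTa11}, ${\rm e}^{-\pi L^{(\alpha)}_V}$ and $J^{(\alpha)}_V$ are the modular operator and conjugation of $\bigl({\mathcal R}_V(I_\alpha),\Omega_V\bigr)$; transporting this under ${\mathbb V}$ shows that ${\rm e}^{-\pi L^{(\alpha)}_{\rm int}}$ is the modular operator and ${\mathbb V}J^{(\alpha)}_V{\mathbb V}^{-1}$ the modular conjugation of $\bigl({\mathcal R}(I_\alpha),\Omega_{\rm int}\bigr)$, whence \eqref{squaresquare} holds with this transported conjugation. To identify it with the explicit $J^{(\alpha)}$ of \eqref{est1.2}, I would use that the reflection $\Theta^{(\alpha)}_{\pi/2}$ entering both \eqref{est1.2} and \eqref{wichtig3} is the same and is compatible with ${\mathbb V}$ (the interaction $V(S^2)$ being $\Theta$-invariant), equivalently Araki's principle that the modular conjugation is unchanged under perturbation by a strictly positive element affiliated to the algebra; this gives ${\mathbb V}J^{(\alpha)}_V{\mathbb V}^{-1}=J^{(\alpha)}$ and hence v). Finally, Araki's cocycle formula expresses the perturbed modular generator as $\overline{H^{(\alpha)}-J^{(\alpha)}V^{(\alpha)}J^{(\alpha)}}$; since $J^{(\alpha)}V^{(\alpha)}J^{(\alpha)}$ is affiliated to ${\mathcal U}(I_{\alpha+\pi})\subset{\mathcal R}(I_\alpha)'$, the hypercontractivity estimates of part i) again yield essential self-adjointness of the difference on ${\mathscr D}(H^{(\alpha)})\cap{\mathscr D}(J^{(\alpha)}V^{(\alpha)}J^{(\alpha)})$, and comparison with the transported generator identifies the closure with $L^{(\alpha)}_{\rm int}$, while $L^{(\alpha)}_{\rm int}\Omega_{\rm int}=0$ is modular invariance of $\Omega_{\rm int}$; this is iv).

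\emph{Main obstacle.} The genuinely technical point is checking that the interaction $V^{(\alpha)}$, localised on $I_\alpha$ with the weight $\cos(\psi+\alpha)$ that \emph{degenerates at the edges} $(0,\pm r,0)$ of the wedge, satisfies the hypotheses of the Derezinski--Jaksic--Pillet framework relative to $L^{(\alpha)}={\rm d}\Gamma(\omega r\,\cos_{\psi+\alpha})$, whose one-particle part has purely absolutely continuous spectrum $[0,\infty)$ and hence \emph{no spectral gap}. This forces the full strength of the $N_\tau$/hypercontractivity estimates for ${\mathscr P}(\varphi)_2$, with care near the fixed points of the boost; once these are in place, the remaining steps are bookkeeping with the already-established unitary ${\mathbb V}$ and Tomita--Takesaki data.
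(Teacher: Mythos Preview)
Your proposal is essentially correct and follows the same Araki/DJP perturbation-theoretic strategy as the paper. The paper's proof is, like yours, largely a sequence of citations: items ii), iii), iv), v) are dispatched by pointing to \cite{DJP,KL1}, \cite[Theorem~5.5]{DJP} together with \cite{A4}, \cite[Theorem~6.12]{GeJ}, and \cite{KL1} plus \cite[Lemma~7.13]{GeJ} respectively, which matches your outline closely (including the identification \eqref{Vhalbkugel} for ii) and the transport under ${\mathbb V}$ for iv), v)).

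The one noteworthy difference is in part i). Rather than proving essential self-adjointness of $L^{(\alpha)}+V^{(\alpha)}$ via hypercontractivity---which, as you yourself flag, is delicate here because $L^{(\alpha)}_{\upharpoonright I_\alpha}$ has no spectral gap---the paper runs the argument in the opposite direction: $H^{(\alpha)}$ is \emph{already} self-adjoint, being the generator of the local symmetric semigroup $\bigl(P^{(\alpha)}_{\rm int}(\theta),{\mathscr D}^{(\alpha)}_{\theta,{\rm int}}\bigr)$ supplied by Proposition~\ref{hos-section} via Theorem~\ref{K-L--F}, and one simply checks that $L^{(\alpha)}+V^{(\alpha)}$ coincides with $H^{(\alpha)}$ on the core
\[
\widehat{\mathscr D}=\bigcup_{0<\theta\le\pi}\Bigl(\bigcup_{0<\theta'<\theta}P^{(\alpha)}_{\rm int}(\theta'){\mathscr D}^{(\alpha)}_{\theta,{\rm int}}\Bigr).
\]
This neatly sidesteps the very obstacle you identified: the Fr\"ohlich/Klein--Landau machinery delivers self-adjointness directly from the FKN path-space structure, with no gap or hypercontractivity hypothesis on the free generator needed.
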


\begin{proof}
Most of the arguments rely on results from the literature:
\begin{itemize}
\item [$ i.)$]  Essential selfadjointness follows from the results on local symmetric semigroups by Fr\"ohlich \cite{F80} and 
Klein and Landau \cite{KL1}\cite{KL2}. The key step in the proof is to show that 
$L^{(\alpha)}+V^{(\alpha)} = H^{(\alpha)}$ on 
		\[
		\widehat{\mathscr D}= \bigcup_{0< \theta \le \pi} \left( \bigcup_{0< \theta' <\theta}
		P^{(\alpha)}_{\rm int}(\theta') {\mathscr D}_{\theta, {\rm int}}^{(\alpha)} \right) \; ,
		\]
which itself  is a core for $H^{(\alpha)}$.
\item [$ ii.)$] The expression on the l.h.s.~of (\ref{Araki}) is a formula, which is well know from the perturbation theory of KMS states 
(see \cite{DJP}\cite{KL1}).
The identification (\ref{Araki}) follows from \eqref{Vhalbkugel}.
Note that 
the final expression on the r.h.s.~is independent of $\alpha$.
\item [$ iii.)$] The  Peierls-Bogoliubov  and 
the Golden-Thompson inequalities (see \cite{A4}) 
were generalised to the present case in~\cite[Theorem 5.5]{DJP}.
\item [$ iv.)$] Since ${\rm e}^{- 2 \pi V^{(\alpha)}}\in L^{1}({\mathcal Q}, \Sigma^{(\alpha)}, {\rm d} \Phi_C)$ and
	\[
		V\in L^{p}({\mathcal Q}, \Sigma^{(\alpha)}, {\rm d} \Phi_C) \; ,\quad
		{\rm e}^{-\pi V^{(\alpha)}}\in L^{q}({\mathcal Q}, \Sigma^{(\alpha)}   , {\rm d} \Phi_C) \; , 		
	\]
with $p^{-1}+ q^{-1}=\frac{1}{2} $ and $ 2\leq p, q\leq \infty$, property (iv) follows from \cite[Theorem~6.12]{GeJ}.
$L^{(\alpha)}_{\rm int} \Omega_{\rm int}= 0$ follows from \eqref{qtotot} and \eqref{auchwichtig}.
\item [$ v.)$] This result is due to Klein and Landau  \cite{KL1}; see also \cite[Lemma 7.13]{GeJ}.
\end{itemize}
\end{proof}

Note that $H^{(\alpha)}$ does not implement the Lorentz boosts on ${\mathcal H}$. However, 
	\begin{equation}
		\label{hosaut}
		\widehat \alpha_{\Lambda^{(\alpha)} (t)}^{\rm \, int} (A)
		= {\rm e}^{ i t H^{(\alpha)} } A {\rm e}^{- i t H^{(\alpha)} } \qquad \forall
		A \in {\mathcal R}  (I_\alpha ) \; .
	\end{equation}
The difference between $H^{(\alpha)}$ and $L_{\rm int}^{(\alpha)}$
is an unbounded operator affiliated to the commutant ${\mathcal R}  (I_\alpha )'$ 
of ${\mathcal R}  (I_\alpha )$. 
This is in agreement
with the perturbation theory of modular automorphisms  \cite{BR}\cite{DJP}.

\begin{theorem}[Uniqueness of the interacting de Sitter vacuum state]
\label{keyresult3}
For each $\alpha \in [ \, 0, 2 \pi ) $, the restricted state 
	\[
		\omega^{\rm \, int}_{\upharpoonright {\mathcal R}  (I_\alpha)} (A) =  
		\langle \Omega_{\rm int} , A \Omega_{\rm int} \rangle \; , 
		\qquad A \in {\mathcal R}  (I_\alpha) \; , 
	\]
is the unique $\widehat {\alpha}^{\rm \, int}_{\Lambda^{(\alpha)}}$-KMS state 
on~${\mathcal R}  (I_\alpha)$ and (therefore)
$\omega^{\rm \, int}$ is the unique de Sitter vacuum state 
for the $W^*$-dynamical system $\bigl({\mathcal R}  (S^1), 
\widehat {\alpha}_{\Lambda^{(\alpha)}}^{\rm \, int} \bigr)$.
\end{theorem}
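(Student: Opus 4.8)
The existence half is essentially already in hand, so the content of the theorem is \emph{uniqueness}, and I would organise the proof around two steps: (A) the KMS state on each wedge algebra $\mathcal R(I_\alpha)$ is unique, and (B) uniqueness on $\mathcal R(I_\alpha)$ together with $SO_0(1,2)$-invariance forces uniqueness on $\mathcal R(S^1)$. For the record, that $\omega^{\rm \, int}$ \emph{is} a de Sitter vacuum state in the sense of Definition~\ref{vs} follows at once: by Proposition~\ref{knull-V-prop} and Theorem~\ref{uovos} all generators $K_0,L_1^{\rm int},L_2^{\rm int}$ (equivalently all $L^{(\alpha)}_{\rm int}$, using Theorem~\ref{th5.2}~$iv.)$) annihilate $\Omega_{\rm int}$, so $\widehat U_{\rm \, int}(\Lambda)\Omega_{\rm int}=\Omega_{\rm int}$ and $\omega^{\rm \, int}$ is $SO_0(1,2)$-invariant; and by Theorem~\ref{th5.2}~$v.)$ the operator ${\rm e}^{-\pi L^{(\alpha)}_{\rm int}}$ is the modular operator of $\bigl(\mathcal R(I_\alpha),\Omega_{\rm int}\bigr)$, so $\omega^{\rm \, int}_{\upharpoonright\mathcal R(I_\alpha)}$ is KMS for its own modular group, which by \eqref{hosaut} and the fact that $H^{(\alpha)}-L^{(\alpha)}_{\rm int}$ is affiliated to $\mathcal R(I_\alpha)'$ coincides, after the usual reparametrisation, with the boost group $t\mapsto\widehat\alpha^{\rm \, int}_{\Lambda^{(\alpha)}(t/r)}$; hence the geodesic KMS condition at inverse temperature $2\pi r$ holds.

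\textbf{Step (A).} Let $\omega'$ be any normal $\bigl(\widehat\alpha^{\rm \, int}_{\Lambda^{(\alpha)}},2\pi r\bigr)$-KMS state on $\mathcal R(I_\alpha)$. Every $(\tau,\beta)$-KMS state has modular automorphism group $\tau_{-\beta\,\cdot}$, so $\omega'$ and $\omega^{\rm \, int}_{\upharpoonright\mathcal R(I_\alpha)}$ have the \emph{same} modular automorphism group. Both states are faithful (a normal KMS state on a factor is faithful), so the Connes cocycle $t\mapsto(D\omega':D\omega^{\rm \, int})_t$ is defined; since the two modular groups agree it takes values in the centre $\mathcal R(I_\alpha)\cap\mathcal R(I_\alpha)'$, which is $\mathbb C\mathbb 1$ because $\mathcal R(I_\alpha)$ is a factor, and the cocycle identity then forces it to be the constant $\mathbb 1$, whence $\omega'=\omega^{\rm \, int}_{\upharpoonright\mathcal R(I_\alpha)}$ (cf.~\cite[Vol.~II]{BR}). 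The one external input is that $\mathcal R(I_\alpha)$ is a factor — in fact the hyperfinite type~${\rm III}_1$ factor, as stated for $\mathcal R(I)$ in Proposition~\ref{l6.1}~$iii.)$, whose proof the paper defers to \cite{BMJ}; the $SO_0(1,2)$-image of that statement yields factoriality of every $\mathcal R(I_\alpha)$.

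\textbf{Step (B).} Let $\omega'$ be an arbitrary de Sitter vacuum state for $\bigl(\mathcal R(S^1),\widehat\alpha^{\rm \, int}_{\Lambda^{(\alpha)}}\bigr)$. By Definition~\ref{vs} it is $SO_0(1,2)$-invariant, and its restriction to $\mathcal R(I_0)$ is $\bigl(\widehat\alpha^{\rm \, int}_{\Lambda^{(0)}},2\pi r\bigr)$-KMS, so by Step~(A), $\omega'_{\upharpoonright\mathcal R(I_0)}=\omega^{\rm \, int}_{\upharpoonright\mathcal R(I_0)}$; rotation invariance of both states then gives $\omega'_{\upharpoonright\mathcal R(I_\alpha)}=\omega^{\rm \, int}_{\upharpoonright\mathcal R(I_\alpha)}$ for all $\alpha$. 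I would now use that $\Omega_{\rm int}$ is cyclic and separating for $\mathcal R(I_0)$ (the Reeh--Schlieder statement proved just above in the text) together with Haag duality on the circle, $\mathcal R(I_0)'=\mathcal R(I_\pi)$, which holds for this net by the one-particle Bisognano--Wichmann property (Theorem~\ref{one-p-BW}) and second quantisation: then $\omega'$ and $\omega^{\rm \, int}$ agree on $\mathcal R(I_0)$ and on its commutant $\mathcal R(I_\pi)$, which together generate $\mathcal R(S^1)=\mathcal B(\mathcal H)$ (Proposition~\ref{l6.1}~$ii.)$). Representing $\omega'_{\upharpoonright\mathcal R(I_0)}$ by the unique vector in the natural positive cone $\mathcal P^\natural\bigl(\mathcal R(I_0),\Omega_{\rm int}\bigr)$ identifies that vector with $\Omega_{\rm int}$, and the residual freedom in extending a vector state from $\mathcal R(I_0)$ to $\mathcal B(\mathcal H)$ (a unitary in the centraliser of $\omega^{\rm \, int}_{\upharpoonright\mathcal R(I_0)}$) is removed by imposing full $SO_0(1,2)$-invariance: the implementing vector must be the unique (up to phase) $\widehat U_{\rm \, int}$-invariant unit vector, which exists because the one-particle representation $\widehat u$ is irreducible (the corollary at the end of the preceding chapter) and second quantisation turns this into uniqueness of the Fock-level invariant vector. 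Hence $\omega'=\omega^{\rm \, int}$.

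\textbf{Main obstacle.} The genuinely hard input is the factoriality — more precisely the type~${\rm III}_1$ property — of the local algebras $\mathcal R(I_\alpha)$: this is exactly what upgrades ``KMS state'' to ``\emph{unique} KMS state'' in Step~(A), and in the paper it is Proposition~\ref{l6.1}~$iii.)$, whose proof is postponed to \cite{BMJ}. The secondary difficulty is the wedge-to-global passage in Step~(B): one must take care that it is invariance under the whole group $SO_0(1,2)$, not under a single boost, that eliminates the centraliser ambiguity, which is where the irreducibility of $\widehat u$ (equivalently, the uniqueness of the invariant vector) is really needed.
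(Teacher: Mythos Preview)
Your Step~(A) is correct and matches the paper's core argument: a factor admits at most one normal $\beta$-KMS state for a given dynamics (your Connes-cocycle reasoning is essentially the proof of \cite[Prop.~5.3.29]{BR}, which the paper cites directly). However, you take an unnecessarily heavy route to factoriality. The paper does \emph{not} invoke Proposition~\ref{l6.1}~$iii.)$ (the type~III$_1$ classification, whose proof is deferred to~\cite{BMJ}); instead it observes that the \emph{free} $\widehat\alpha^\circ_{\Lambda^{(\alpha)}}$-KMS state on ${\mathcal R}(I_\alpha)$ has already been shown to be unique (Theorem~\ref{th2.5}), and uniqueness of a KMS state forces the underlying algebra to be a factor (extremal KMS states being exactly the factor states). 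Since the Trotter argument around~\eqref{hosaut} gives ${\mathcal R}_{\rm int}(I_\alpha)={\mathcal R}(I_\alpha)$, factoriality transfers immediately. This is more economical and entirely self-contained within the paper.

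Your Step~(B) has a genuine gap. You claim that the implementing vector of a de Sitter vacuum must be the unique $\widehat U_{\rm int}$-invariant unit vector, and that this uniqueness follows from irreducibility of $\widehat u$ via second quantisation. But $\widehat U_{\rm int}$ is \emph{not} $\Gamma(\widehat u)$ --- that is the \emph{free} representation $\widehat U$. The interacting representation is built through the virtual-representation machinery of Section~\ref{virtual} and Theorem~\ref{uovos}, and the paper explicitly records (abstract, item~$v.)$) that it is reducible; irreducibility of $\widehat u$ therefore says nothing about the multiplicity of $\widehat U_{\rm int}$-invariant vectors. Moreover, you have not argued that an arbitrary normal de Sitter vacuum state $\omega'$ on ${\mathcal R}(S^1)={\mathcal B}({\mathcal H})$ is a vector state at all, so the passage to ``the implementing vector'' is itself unjustified. (The paper, for its part, simply writes ``(therefore)'' and offers no argument for the passage from wedge uniqueness to global uniqueness, so the issue you are trying to address is real; but your proposed resolution does not close it.)
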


\begin{proof}
Note that ${\mathcal R} (I_\alpha)$ and 
	\[
		{\mathcal R}_{\rm int} (I_\alpha)
		 = \bigvee_{t \in {\mathbb R}} 
		\left( {\rm e}^{-it  H^{(\alpha)} }  {\mathcal U} (I_\alpha)  
		{\rm e}^{it  H^{(\alpha)} }\right) \; ,
	\]
coincide: as consequence of \eqref{Araki1}, \eqref{hosaut} and the Trotter product formula
	\[
	\widehat {\alpha}_{\Lambda^{(\alpha)}}^{\rm \, int} ( A) = \lim_{n \to \infty} 
	\left( {\rm e}^{i  \frac{t}{n} V^{(\alpha)} } 
	\widehat {\alpha}_{\Lambda^{(\alpha)} (t/n)}^\circ ( A)	
	{\rm e}^{i \frac{t}{n} V^{(\alpha)} }\right)^n  \; , 
	\quad A \in {\mathcal U} (I_\alpha)  \; ,
	\]
we find ${\mathcal R} (I_\alpha)={\mathcal R}_{\rm int} (I_\alpha)$, 
as ${\rm e}^{i t V^{(\alpha)} } \in {\mathcal R}  (I_\alpha)$ for $t \in {\mathbb R}$. 
For the free field the $\widehat {\alpha}^{\circ}_{\Lambda^{(\alpha)}}$-KMS state 
on~${\mathcal R} (I_\alpha)$ is unique, thus 
${\mathcal R} (I_\alpha)$ is a factor, and uniqueness of the interacting state now is 
a direct consequence of \cite[Proposition~5.3.29]{BR}, as was kindly pointed out to us by Jan Derezinski. 
\end{proof}

\begin{theorem}
\label{keyresult1}
The operator sum $L^{(\alpha)}+V_0 (\cos_{\psi + \alpha})$ 
is essentially self-adjoint and the closure
	\[
	 \overline{L^{(\alpha)}+V_0 (\cos_{\psi + \alpha})} = L^{(\alpha)}_{\rm int} \; , 
	\] 
where ${V_0} (h) $ was defined in (\ref{bbv-interaction}). 
\end{theorem}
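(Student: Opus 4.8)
The plan is to deduce this from Theorem~\ref{th5.2}, parts $i.)$ and $iv.)$, by establishing the operator identity
\[
V_0(\cos_{\psi+\alpha}) \;=\; V^{(\alpha)} - J^{(\alpha)} V^{(\alpha)} J^{(\alpha)} \; ,
\]
where $V^{(\alpha)}$ is the half-circle interaction \eqref{halberkreis} and $J^{(\alpha)}$ the modular conjugation of Definition~\ref{def4}. Granting this, one has formally $L^{(\alpha)} + V_0(\cos_{\psi+\alpha}) = \bigl(L^{(\alpha)} + V^{(\alpha)}\bigr) - J^{(\alpha)} V^{(\alpha)} J^{(\alpha)}$; since $\overline{L^{(\alpha)} + V^{(\alpha)}} = H^{(\alpha)}$ by Theorem~\ref{th5.2}~$i.)$ and $\overline{H^{(\alpha)} - J^{(\alpha)} V^{(\alpha)} J^{(\alpha)}} = L^{(\alpha)}_{\rm int}$ by Theorem~\ref{th5.2}~$iv.)$, the assertion follows once the domains are matched. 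Thus there are two tasks: verify the displayed identity, and control the cores. By conjugating with $\widehat{u}(R_{0}(\alpha))$ it suffices to treat $\alpha=0$.

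To prove the identity I would first determine the action of $J^{(0)}$ on the Euclidean time-zero fields. By Theorem~\ref{th3.10}, $J^{(0)}{\mathcal V}(F) = {\mathcal V}\bigl(\Theta^{(0)}_{\pi/2}\overline F\bigr)$ with $\Theta^{(0)}_{\pi/2} = {\rm U}^{(0)}(\pi/2)\,\Theta\,{\rm U}^{(0)}(-\pi/2)$; composing the three orthogonal transformations in the ambient space $\mathbb{R}^3$ shows that, restricted to the equator $S^1$, this acts as the edge reflection $\Theta_{W_1}$, i.e.\ as ${\tt P}_1\colon\psi\mapsto\pi-\psi$, which exchanges $I_+$ and $I_-$. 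Hence for real-valued $h\in{\mathcal D}_{\mathbb R}(I_+)$ one gets $J^{(0)}\Phi^{\rm os}(0,h)J^{(0)} = \Phi^{\rm os}\bigl(0,({\tt P}_1)_*h\bigr)$. Since the time-zero covariance $C_{0}$ has kernel proportional to $P_{s^+}(-\cos(\psi-\psi'))$ (Lemma~\ref{3.9}), it is invariant under rotations about the $x_0$-axis, and the formula of Corollary~\ref{wichtig}~$i.)$ together with the relations $({\tt P}_1)_*\varepsilon({\tt P}_1)_*=-\varepsilon$ and $({\tt P}_1)_*|{\rm cos}_\psi|({\tt P}_1)_*=|{\rm cos}_\psi|$ shows it is also $({\tt P}_1)_*$-invariant; therefore Wick ordering with respect to $C_{0}$ is preserved, so $J^{(0)}\,{:}{\mathscr P}(\Phi^{\rm os}(0,h)){:}_{C_{0}}\,J^{(0)} = {:}{\mathscr P}\bigl(\Phi^{\rm os}(0,({\tt P}_1)_*h)\bigr){:}_{C_{0}}$. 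Inserting this into \eqref{halberkreis}, substituting $\psi'' = \pi-\psi$ — whose Jacobian on $S^1$ is $1$, which sends $I_+$ onto $I_-$ and $\cos\psi$ to $-\cos\psi''$ — and using the identification \eqref{identi} of $\Phi^{\rm os}(0,\cdot)$ with $\Phi(0,\cdot)$, gives $J^{(0)}V^{(0)}J^{(0)} = -\int_{I_-} r\,{\rm d}\psi\,\cos\psi\,{:}{\mathscr P}(\Phi(0,\psi)){:}_{C_{0}}$. Adding $V^{(0)} = \int_{I_+} r\,{\rm d}\psi\,\cos\psi\,{:}{\mathscr P}(\Phi(0,\psi)){:}_{C_{0}}$ and recalling $S^1 = I_+\cup I_-$ up to a null set yields $V^{(0)} - J^{(0)}V^{(0)}J^{(0)} = V_0(\cos_\psi)$, as in \eqref{bbv-interaction}.

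For essential self-adjointness and the closure I would invoke the perturbation theory of modular generators for unbounded perturbations satisfying $L^p$-conditions (\cite{DJP,GeJ}), which is the same machinery underlying Theorem~\ref{th5.2}. Since $V^{(0)}$ is affiliated to ${\mathcal R}(I_+)$ and $J^{(0)}V^{(0)}J^{(0)}$ to the commutant ${\mathcal R}(I_+)' = {\mathcal R}(I_-)$, the two commute strongly, so $V_0(\cos_\psi)$ is a well-defined self-adjoint operator; and on the intersection ${\mathscr D}(L^{(0)})\cap{\mathscr D}(V^{(0)})\cap{\mathscr D}(J^{(0)}V^{(0)}J^{(0)})$ — which contains the analytic vectors of Proposition~\ref{boostanaliticity} and is therefore a core for $L^{(0)}_{\rm int}$ — one has $L^{(0)} + V_0(\cos_\psi) = \bigl(L^{(0)} + V^{(0)}\bigr) - J^{(0)}V^{(0)}J^{(0)} = H^{(0)} - J^{(0)}V^{(0)}J^{(0)}$, whence Theorem~\ref{th5.2}~$iv.)$ gives $\overline{L^{(0)} + V_0(\cos_\psi)} = L^{(0)}_{\rm int}$.

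The main obstacle I expect to be this last, domain-theoretic step: pinning down an explicit core for $L^{(\alpha)}_{\rm int}$ that lies simultaneously in the domains of $L^{(\alpha)}$, $V^{(\alpha)}$ and $J^{(\alpha)}V^{(\alpha)}J^{(\alpha)}$, so that the rearrangement into the two-step Araki perturbation is rigorously justified and Theorem~\ref{th5.2}~$iv.)$ applies verbatim; the $L^p$-estimates on the Wick-ordered interaction from Lemma~\ref{wickooo} and \cite[Lemma~3.15]{SHK}, together with the strong commutativity of $V^{(\alpha)}$ and $J^{(\alpha)}V^{(\alpha)}J^{(\alpha)}$, are precisely what is needed here. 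By contrast, the field-theoretic ingredient — the invariance $C_{0}\bigl(({\tt P}_1)_*\cdot,({\tt P}_1)_*\cdot\bigr) = C_{0}$ and the resulting transformation law of the Wick-ordered interaction under the modular conjugation — is routine.
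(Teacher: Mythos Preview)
Your approach is essentially identical to the paper's: both establish the identity $V_0(\cos_{\psi+\alpha}) = V^{(\alpha)} - J^{(\alpha)}V^{(\alpha)}J^{(\alpha)}$ by observing that $J^{(\alpha)}$ implements the spatial reflection $P^{(\alpha)}$ on the time-zero fields (so that $\cos\psi\mapsto -\cos\psi$ while the Wick ordering is preserved), and then combine Theorem~\ref{th5.2}~$i.)$ and~$iv.)$. The paper's proof is three lines and simply asserts these facts; you have spelled out the geometry of $\Theta^{(0)}_{\pi/2}$, the $({\tt P}_1)_*$-invariance of $C_0$, and the domain-matching via the two-step Araki perturbation in considerably more detail than the paper does --- but there is no substantive difference in strategy. (One terminological quibble: the map $R_1(\pi/2)\,T\,R_1(-\pi/2)$ equals $P_1$, not the edge reflection $\Theta_{W_1}=P_1T$; of course they agree on $S^1$, so your computation is unaffected.)
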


Note that the integration in (\ref{bbv-interaction})  is over the 
whole circle~$S^1$, while the integration in~(\ref{halberkreis}) is restricted 
to the halfcircle $I_\alpha$. 

\begin{proof} This result 
follows from the fact that $J^{(\alpha)}$ implements the space-reflec\-tion 
$P^{(\alpha)} = R_{0} (\alpha) P_1 R_{0} (\alpha)^{-1}$
on ${\mathcal U} (S^1)$ and $\cos (\frac{\pi}{2} + \psi)  = - \cos (\frac{\pi}{2} - \psi)$. Thus
	\[
	 V_0 (\cos_{\psi + \alpha}) 
	 = \overline{ V^{(\alpha)} - J^{(\alpha)} V^{(\alpha)} J^{(\alpha)} }  \; , 
	\] 
The statement now follows from Theorem \ref{th5.2} $i.)$ and $ iv.)$. 
\end{proof}
 
\section{Local von Neumann algebras on the circle $S^1$}
\label{sec:loccirc}

Let ${\mathcal A}^{(\alpha)}_{r}(I_\alpha)$  denote the
von Neumann algebra generated by 
	\[
		\left\{ \widehat {\boldsymbol \alpha}^{\circ}_{\Lambda^{(\alpha)} (t )} (A) 
		\mid A\in {\mathcal U} (I_\alpha), \; |t |<r \right\}.
	\]
Then
\label{loc-nc-algpage}
	\[
		\bigcap_{r>0} {\mathcal A}^{(\alpha)}_{r}(I_\alpha)= {\mathcal R} (I_\alpha) \;  .
	\]
(This is a special case of Theorem \ref{p6.1} below). This suggest to define
the local non-commutative von Neumann algebra ${\mathcal R} (I)$
as the intersection over the von Neumann algebras ${\mathcal A}^{(\alpha)}_{r}(I)$ generated by 
	\[
		\left\{ \widehat {\alpha}^{\circ}_{\Lambda^{(\alpha)} (t )} (A) 
		\mid A\in {\mathcal U} (I), \; |t |<r \right\}.
	\]
There is however the question, whether this definition depends on $\alpha$. This is not the case, 
as will be shown next.

\begin{theorem} 
\label{p6.1}
Consider the real subspace $\widehat{\mathfrak h} (I) \subset \widehat{\mathfrak h} (S^1)$,
	\begin{equation}
	\widehat{\mathfrak h} (I) 
	\doteq
	\left\{ h \in \widehat{\mathfrak h} (S^1)   \mid  
	{\rm supp\,} \left( \Re h \, ,  \, \omega^{-1}\Im h \right) \subset I \times I \right\}  \; , 
	\label{ekg4}
	\end{equation}
first introduced in Section \ref{Sect: canon-HS}. Then
	\begin{equation}
		\label{intersectBa2}
		{\mathcal R} (I) \doteq \bigcap_{r>0} {\mathcal A}^{(\alpha)}_{r}(I)
		= {\mathfrak W} ( \widehat{{\mathfrak h}} ( I) )'' \; , \qquad I  \subset I^{(\alpha)} \; .
	\end{equation}
In particular,  ${\mathcal R} (I)$ as defined in \eqref{intersectBa2}  does not dependent on $\alpha$.
\end{theorem}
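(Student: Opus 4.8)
The plan is to descend to the one-particle level and prove there the statement about closed real subspaces of $\widehat{\mathfrak h}(S^1)$, then lift it back with Araki's correspondence between second-quantized von Neumann algebras and closed real subspaces (Proposition~\ref{araki}, Eq.~\eqref{et.2n}). Under the Fock identification ${\mathcal H}\cong\Gamma(\widehat{\mathfrak h}(S^1))$ the abelian algebra ${\mathcal U}(I)$ corresponds to the closure in $\widehat{\mathfrak h}(S^1)$ of the real-valued functions, $\widehat{\mathfrak h}(I)_{\mathbb R}\doteq\overline{{\mathcal D}_{\mathbb R}(I)}$, while $\widehat\alpha^\circ_{\Lambda^{(\alpha)}(t)}$ acts on Weyl operators by $W_F(h)\mapsto W_F(\widehat u(\Lambda^{(\alpha)}(t))h)$. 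Hence ${\mathcal A}^{(\alpha)}_r(I)={\mathfrak W}\bigl(\mathcal K^{(\alpha)}_r(I)\bigr)''$ with $\mathcal K^{(\alpha)}_r(I)\doteq\overline{\sum_{|t|<r}\widehat u(\Lambda^{(\alpha)}(t))\,\widehat{\mathfrak h}(I)_{\mathbb R}}$, and it suffices to show
\[
\bigcap_{r>0}\mathcal K^{(\alpha)}_r(I)=\widehat{\mathfrak h}(I),\qquad I\subset I^{(\alpha)},
\]
with $\widehat{\mathfrak h}(I)$ the full real subspace of Definition~\ref{local-h-hat}. Since its right-hand side does not involve $\alpha$, the asserted $\alpha$-independence is then automatic.

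For the inclusion ``$\supseteq$'' I would fix $r>0$ and show $\mathcal K^{(\alpha)}_r(I)\supseteq\widehat{\mathfrak h}(I)$. Taking $t=0$ gives $\widehat{\mathfrak h}(I)_{\mathbb R}\subset\mathcal K^{(\alpha)}_r(I)$. For the ``momentum'' directions, let $h\in{\mathcal D}_{\mathbb R}(I)$; since $C^\infty(S^1)\subset{\mathscr D}(\omega)$, $h$ lies in the domain of the one-particle generator $\omega r\,{\rm cos}_{\psi+\alpha}$ of $t\mapsto\widehat u(\Lambda^{(\alpha)}(t))$ (Theorem~\ref{UIR-S1}, Proposition~\ref{propfreeboost}), so the difference quotients $t^{-1}\bigl(\widehat u(\Lambda^{(\alpha)}(t))h-h\bigr)$, which lie in $\mathcal K^{(\alpha)}_r(I)$ for $|t|<r$, converge in norm to $i\,\omega r\,{\rm cos}_{\psi+\alpha}h$; as $\mathcal K^{(\alpha)}_r(I)$ is closed this vector belongs to it. Because ${\rm cos}(\psi+\alpha)$ vanishes only at the two endpoints of the half-circle $I^{(\alpha)}$ (Proposition~\ref{propfreeboost}) while ${\rm supp}\,h$ is a compact subset of the open interval $I\subset I^{(\alpha)}$, the map $h\mapsto{\rm cos}_{\psi+\alpha}h$ is a bijection of ${\mathcal D}_{\mathbb R}(I)$, so $\{\,i\,\omega r\,g:g\in{\mathcal D}_{\mathbb R}(I)\}\subset\mathcal K^{(\alpha)}_r(I)$. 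By Corollary~\ref{tztf} together with the unitary ${\mathbb U}\colon\widehat{\mathfrak h}(I)\to{\mathfrak h}({\mathcal O}_I)$, the real span of $\{h_1+i\,\omega r\,h_2:h_1,h_2\in{\mathcal D}_{\mathbb R}(I)\}$ is dense in $\widehat{\mathfrak h}(I)$, whence $\widehat{\mathfrak h}(I)\subset\mathcal K^{(\alpha)}_r(I)$ for every $r>0$.

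For ``$\subseteq$'' I would invoke finite speed of propagation. Corollary~\ref{halpha}, conjugated by the rotation $\widehat u(R_0(\alpha))$, gives $\widehat u(\Lambda^{(\alpha)}(t))\,\widehat{\mathfrak h}(I)\subset\widehat{\mathfrak h}\bigl(I(\alpha,|t|)\bigr)$ with $I(\alpha,\tau)$ the localisation set of Proposition~\ref{ialpha}; since these sets are nested increasing in $\tau$ (Remark after Proposition~\ref{ialpha}) and $\widehat{\mathfrak h}(\cdot)$ is additive and regular from the inside (Eq.~\eqref{ekg5}), $\mathcal K^{(\alpha)}_r(I)\subset\widehat{\mathfrak h}\bigl(\bigcup_{0\le\tau<r}I(\alpha,\tau)\bigr)$. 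As $r\downarrow0$ the intervals $\bigcup_{0\le\tau<r}I(\alpha,\tau)$ shrink to $\overline I$ (by \eqref{length} their length tends to $|I|$), so regularity from the outside (Eq.~\eqref{ekg5}) yields $\bigcap_{r>0}\mathcal K^{(\alpha)}_r(I)\subset\bigcap_{J\supset\overline I}\widehat{\mathfrak h}(J)=\widehat{\mathfrak h}(I)$. Combining the two inclusions gives the displayed one-particle identity, and applying ${\mathfrak W}(\cdot)''$ together with Araki's correspondence produces ${\mathcal R}(I)=\bigcap_{r>0}{\mathcal A}^{(\alpha)}_r(I)={\mathfrak W}(\widehat{\mathfrak h}(I))''$, independently of $\alpha$. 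As an alternative for ``$\subseteq$'', one can run this argument directly at the algebra level from Theorem~\ref{fst-theorem} and the outer regularity of $I\mapsto{\mathcal R}(I)$ in Proposition~\ref{l6.1}~$i.)$.

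The main obstacle I anticipate is twofold, and both parts are ``soft'' rather than computational. First, pinning down that the combinations $h_1+i\,\omega r\,h_2$ with $h_j$ supported in $I$ are dense in the abstractly defined local subspace $\widehat{\mathfrak h}(I)$ is a Reeh--Schlieder/locality statement; I would extract it from Corollary~\ref{tztf} together with the commutative diagram of Proposition~\ref{Prop5.7}. Second, the bookkeeping of open versus closed intervals in the limit $r\downarrow0$ — verifying that $\bigcup_{0\le\tau<r}I(\alpha,\tau)$ genuinely decreases to $\overline I$, with attention to the endpoints of $I^{(\alpha)}$ where ${\rm cos}(\psi+\alpha)$ degenerates and where the formula \eqref{length} must be read carefully.
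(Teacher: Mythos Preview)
Your proof is correct and follows essentially the same route as the paper's. Both arguments hinge on the same two ingredients: finite speed of propagation together with outer regularity for the inclusion $\bigcap_r \subset {\mathfrak W}(\widehat{\mathfrak h}(I))''$, and the difference-quotient computation producing $i\,\omega r\,{\rm cos}_{\psi+\alpha}h$ (followed by the density of $h_1+i\omega r\,h_2$ in $\widehat{\mathfrak h}(I)$) for the reverse inclusion. The only real difference is packaging: you work throughout at the one-particle level and invoke Araki's correspondence once at the end, whereas the paper runs the same steps directly with Weyl operators and the strong-operator continuity of $h\mapsto W_F(h)$; the paper also routes the ``$\supseteq$'' direction through an auxiliary $J$ with $\overline J\subset I$ and inner regularity, which your formulation shows is not actually needed since the difference-quotient argument already works for every $r>0$ with test functions in ${\mathcal D}_{\mathbb R}(I)$.
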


\begin{proof} The following argument is similar to the one given in the proof of \cite[Theorem 6.5]{GeJII}. 
To simplify the notation, set
	\begin{equation}
	{\mathscr W} (I) \doteq {\mathfrak W} ( \widehat{\mathfrak h} ( I) )'' \; .
	\end{equation}
We first prove that $\bigcap_{r>0} {\mathcal A}^{(\alpha)}_{r}(I)\subset {\mathscr W} (I)$. 
Using ${\mathcal U} (I)\subset {\mathcal R} (I)$ and finite-speed-of-light (Theorem~\ref{fst-theorem}), we see that 
	\[
	 {\mathcal A}^{(\alpha)}_{r}(I)\subset {\mathscr W} \bigl( I (\alpha, r) \bigr)\; 
	 \qquad \forall r > 0 \; . 
	 \]
According to Proposition \ref{l6.1} the von Neumann algebras ${\mathscr W} (I)$, $I \subset S^1$, are regular from the outside. 
This implies~$\bigcap_{r>0} {\mathcal A}^{(\alpha)}_{r}(I)\subset {\mathscr W} (I)$.

Let us now prove that ${\mathscr W} (I) \subset \bigcap_{r>0} {\mathcal A}^{(\alpha)}_{r}(I)$. 
Using that the local time-zero algebras are regular from 
the inside (Proposition \ref{l6.1}), it suffices to show that 
for each~$\overline{J}\subset
I$ there exists some positive real number $r \ll 1$ such that  
	\begin{equation}
		\label{e6.02}
	{\mathscr W} (J)\subset {\mathcal A}^{(\alpha)}_{r}(I) \; .
	\end{equation}
To this end we fix $I$ and $J$ with $\overline{J}\subset I$ and set
$\delta={\frac{1}{2}}{\rm dist}(J, I^{c})$. We first note that 
	\begin{equation}
		\label{e6.03}
		{\rm e}^{i tL^{(\alpha)} }A{\rm e}^{-i tL^{(\alpha)} } \in {\mathcal A}^{(\alpha)}_{r}(I) \; , 
		\quad A\in
		{\mathcal U} (J) \; , \quad |t|< r \; ,
	\end{equation}
if $0< r<\delta$. Clearly, the Weyl operators ${\rm W}_F (h) = \exp(i \Phi^{\rm os} (0,h) )$, $h \in \widehat {\mathfrak h}(S^1)$ real valued, 
belong to~${\mathcal U} (J)$ if 
${\rm supp\,} h\in  J$ and hence to ${\mathcal A}^{(\alpha)}_{r}(I)$. 
Now~(\ref{e6.03}) implies 
	\begin{align}
		\label{e6.04}
		\widehat {\alpha}^{\, \circ}_{\Lambda^{(\alpha)} (t)} \bigl({\rm W}_F (h) \bigr) 
		&=		{\rm W}_F \bigl({\rm e}^{i t \omega r \, \cos_{\psi +\alpha}   }h \bigr) 
		\in {\mathcal A}^{(\alpha)}_{r}(I)\; , \qquad |t|<r \; . 
	\end{align}
Hence 
	\[
	{\rm W}_F \bigl(t^{-1}({\rm e}^{i t \omega r \, \cos_{\psi +\alpha} }h -h) \bigr)
	\in {\mathcal A}^{(\alpha)}_{r}(I)\; , \qquad 
	|t  |<\epsilon\; . 
	\]
Letting $t \to 0$ and using the fact that the map $h\mapsto {\rm W}_F (h)$ is continuous for the strong
operator topology, we obtain that ${\rm W}_F(i \omega \, r  \cos_{\psi +\alpha}  h)
\in {\mathcal A}^{(\alpha)}_{r}(I)$. 
But any vector 
$h \in \widehat{\mathfrak h} (J)$ can be
approximated in norm by vectors of the form 
	\[
	h_{1}+ i \omega \, r \cos_{\psi +\alpha} h_{2} \; , 
	\]
with ${\rm supp\,} h_{i}\in J$, $i=1,2$, real and $\cos_{\psi +\alpha} h_{2}\in {\mathscr D}(\omega)$. 
Thus for all $h\in \widehat{\mathfrak h} (J)$ 
the operators ${\rm W}_F (h)$ belong to ${\mathcal A}^{(\alpha)}_{r}(I)$ and hence 
${\mathscr W} (J) \subset {\mathcal A}^{(\alpha)}_{r}(I)$.  
\end{proof} 

\section{Finite speed of light for the ${\mathscr P}(\varphi)_2$ model}
\label{sec:fsol3}

The set	(see Proposition \ref{ialpha} for an explicit formula)
	\[
			I (\alpha , t) 
		= S^1 \cap \Bigl( \bigcup_{y \in \Lambda^{(\alpha)} (t) I }   
		\Gamma^- (y )\cup  \Gamma^+ (y ) \Bigr) 
	\]
describes the localisation region for the Cauchy data, which 
can influence space-time points in the set $\Lambda^{(\alpha)} (t) I$, $t \in \mathbb{R}$ fixed. 

\begin{theorem}
\label{fst-2-theorem}
Let $I \subset S^1$ 
be an open interval. Then 
	\begin{equation}
		\widehat 
		\alpha^{\rm \, int}_{\Lambda^{(\alpha)} (t)}
 		\colon  
		{\mathcal R} (I)\hookrightarrow 
		{\mathcal R} \bigl( I (\alpha , t)  \bigr) \; .
		\label{e6.1ef}
	\end{equation}
\end{theorem}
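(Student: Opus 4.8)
The plan is to reduce the assertion to the free finite‑speed‑of‑propagation result (Theorem~\ref{fst-theorem}) by means of a Trotter product expansion, treating the interaction separately. By additivity and regularity of the net $I\mapsto{\mathcal R}(I)$ (Proposition~\ref{l6.1}) it suffices to treat an open interval $I$ contained in the half‑circle $I_\alpha$: the case $I=S^1$ is trivial since ${\mathcal R}(S^1)={\mathcal B}({\mathcal H})$, and a general interval decomposes into sub‑intervals of $I_\alpha$ and of $I_{\alpha+\pi}$, the latter handled by the symmetric argument (using $\Lambda^{(\alpha+\pi)}(t)=\Lambda^{(\alpha)}(-t)$). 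For such $I$ one has ${\mathcal R}(I)\subset{\mathcal R}(I_\alpha)$, hence by \eqref{hosaut} $\widehat\alpha^{\rm \, int}_{\Lambda^{(\alpha)}(t)}(A)={\rm e}^{itH^{(\alpha)}}A\,{\rm e}^{-itH^{(\alpha)}}$ for $A\in{\mathcal R}(I)$, where $H^{(\alpha)}=\overline{L^{(\alpha)}+V^{(\alpha)}}$ by Theorem~\ref{th5.2}(i); here $L^{(\alpha)}$ generates the free boost $\widehat\alpha^\circ_{\Lambda^{(\alpha)}}$ and $V^{(\alpha)}=V_0(\cos_{\psi+\alpha}\chi_{I_\alpha})$ is self‑adjoint, bounded below, and affiliated to the \emph{abelian} algebra ${\mathcal U}(I_\alpha)$. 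Using the Trotter product formula — available here by the reasoning underlying the proof of Theorem~\ref{keyresult3} and the computation of Section~\ref{sec6.2} — one obtains, with $\tau_s(\,\cdot\,)\doteq{\rm e}^{isV^{(\alpha)}}(\,\cdot\,){\rm e}^{-isV^{(\alpha)}}$,
\[
	\widehat\alpha^{\rm \, int}_{\Lambda^{(\alpha)}(t)}=s\mbox{-}\lim_{n\to\infty}\bigl(\widehat\alpha^\circ_{\Lambda^{(\alpha)}(t/n)}\circ\tau_{t/n}\bigr)^{n}\qquad\text{on }{\mathcal R}(I_\alpha).
\]

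The core of the argument is the claim that, for every open interval $K\subseteq I_\alpha$ and every $s$,
\[
	\widehat\alpha^\circ_{\Lambda^{(\alpha)}(s)}\circ\tau_s\bigl({\mathcal R}(K)\bigr)\subseteq{\mathcal R}\bigl(K(\alpha,s)\bigr).
\]
Granting this, the theorem follows by applying it $n$ times with $s=t/n$: the free automorphisms compose to $\widehat\alpha^\circ_{\Lambda^{(\alpha)}(t)}$, the local algebras are weakly closed, and the causal shadows obey $I(\alpha,\tau)\subset I_\alpha$ for $I\subset I_\alpha$ together with the additive nesting property that the $n$‑fold iterate of $K\mapsto K(\alpha,t/n)$, applied to $I$, is contained in $I(\alpha,t)$ — geometric facts about the one‑parameter group $\Lambda^{(\alpha)}$ (which leaves $W^{(\alpha)}$ invariant) that can be read off the explicit formulas of Lemma~\ref{FSoL} and Remark~(ii) after Proposition~\ref{ialpha}. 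Thus $\bigl(\widehat\alpha^\circ_{\Lambda^{(\alpha)}(t/n)}\circ\tau_{t/n}\bigr)^{n}({\mathcal R}(I))\subseteq{\mathcal R}(I(\alpha,t))$ for all $n$, and since ${\mathcal R}(I(\alpha,t))$ is weakly closed, the assertion follows upon taking the limit.

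To prove the claim, the inclusion $\widehat\alpha^\circ_{\Lambda^{(\alpha)}(s)}({\mathcal R}(K))\subseteq{\mathcal R}(K(\alpha,s))$ is precisely the free finite‑speed statement, Theorem~\ref{fst-theorem} (itself a consequence of Corollary~\ref{halpha}); it therefore remains to show that $\tau_s$ leaves each local algebra ${\mathcal R}(K)$, $K\subseteq I_\alpha$, invariant. I would use that ${\mathcal R}(K)$ is generated by the Weyl operators $\widehat{\rm W}_F(h)$, $h\in\widehat{\mathfrak h}(K)$, which by Corollary~\ref{tztf} and the map~\eqref{s1-to-dS} are of ``position type'', ${\rm e}^{i\Phi^{\rm os}(0,h_1)}$ with $h_1$ real supported in $K$, or ``momentum type'', $\widehat{\rm W}_F(i\omega r h_2)$ with $h_2$ real supported in $K$. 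Since $V^{(\alpha)}$ is a Wick‑ordered polynomial in the time‑zero field $\Phi^{\rm os}(0,\,\cdot\,)$, it commutes with every position generator, so $\tau_s$ fixes those. On a momentum generator, $\widehat{\rm W}_F(i\omega r h_2)$ implements, via the Weyl relations, a shift of $\Phi^{\rm os}(0,\,\cdot\,)$ by a function supported in ${\rm supp}\,h_2\subseteq K$; by the translation covariance of Wick ordering this conjugates $V^{(\alpha)}$ into $V^{(\alpha)}+R$, where $R$ is a polynomial of degree strictly below $\deg{\mathscr P}$ in $\Phi^{\rm os}(0,\,\cdot\,)$ integrated against functions supported in $K$, hence a self‑adjoint operator affiliated to ${\mathcal U}(K)$ which moreover commutes with $V^{(\alpha)}$, both being functions of the time‑zero field. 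Consequently $\tau_s\bigl(\widehat{\rm W}_F(i\omega r h_2)\bigr)=\widehat{\rm W}_F(i\omega r h_2)\,U$ for a unitary $U\in{\mathcal U}(K)$, so $\tau_s({\mathcal R}(K))\subseteq{\mathcal R}(K)$, with equality following by applying $\tau_{-s}$.

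The hard part will be making this last step rigorous — namely the translation behaviour of the ultraviolet‑renormalized, Wick‑ordered interaction $V^{(\alpha)}$ under the unbounded generator of $\widehat{\rm W}_F(i\omega r h_2)$: one must show that the remainder $R$ is essentially self‑adjoint, affiliated to ${\mathcal U}(K)$, and commutes with $V^{(\alpha)}$. This is of the standard constructive‑field‑theory type and should be controllable through the estimates behind Theorem~\ref{uvtheo} and Lemma~\ref{wickooo}. The Trotter product formula for $H^{(\alpha)}$ is unproblematic, since the perturbation $V^{(\alpha)}$ is bounded below; and the remaining points — the composition law for the causal shadows $I(\alpha,\,\cdot\,)$ and the reduction of a general interval to sub‑intervals of $I_\alpha$ and $I_{\alpha+\pi}$ — are routine given Propositions~\ref{ialpha} and~\ref{l6.1}.
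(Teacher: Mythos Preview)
Your overall architecture coincides with the paper's: reduce to $I\subset I_\alpha$, implement the interacting boost by ${\rm e}^{itH^{(\alpha)}}$ with $H^{(\alpha)}=\overline{L^{(\alpha)}+V^{(\alpha)}}$, expand via Trotter, and combine the free finite-speed result with the statement that conjugation by ${\rm e}^{isV^{(\alpha)}}$ has zero propagation speed. The difference lies entirely in how you justify this last point.

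You argue generator-by-generator, using that momentum-type Weyl operators shift the time-zero field and hence translate the Wick-ordered interaction by a lower-order remainder $R$ localized in $K$. This is correct in principle, but---as you yourself flag---it forces you to control the renormalized Wick polynomial under field translations and to prove that $R$ is essentially self-adjoint and affiliated to ${\mathcal U}(K)$. The paper bypasses all of this with a one-line locality splitting: since $V^{(\alpha)}=\int_{I_\alpha} r\,{\rm d}\psi\,\cos(\psi+\alpha)\,{:}{\mathscr P}(\Phi^{\rm os}(0,\psi)){:}$ is an integral over $I_\alpha$, one sets $V^{(\alpha)}_{\rm loc}=\int_{K}(\cdots)$, which is affiliated to ${\mathcal U}(K)\subset{\mathcal R}(K)$, while $V^{(\alpha)}-V^{(\alpha)}_{\rm loc}=\int_{I_\alpha\setminus K}(\cdots)$ is affiliated to ${\mathcal U}(I_\alpha\setminus K)\subset{\mathcal R}(K)'$. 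Both pieces are functions of the abelian time-zero field and hence commute, so ${\rm e}^{itV^{(\alpha)}}A\,{\rm e}^{-itV^{(\alpha)}}={\rm e}^{itV^{(\alpha)}_{\rm loc}}A\,{\rm e}^{-itV^{(\alpha)}_{\rm loc}}\in{\mathcal R}(K)$ for every $A\in{\mathcal R}(K)$. This is exactly the Glimm--Jaffe ``zero propagation speed'' argument and requires no analysis of how Wick ordering behaves under translation. Your approach works but buys nothing extra; the locality splitting is both shorter and avoids what you correctly identified as the hard part.
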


\begin{proof}
The following argument  is similar to the one given in the proof of \cite[Theorem 4.1.2]{GJcp}. 
We have seen in Theorem~\ref{fst-theorem}  that 
	\begin{equation}
		\label{e6.1d}
		\widehat  \alpha^{\, \circ}_{\Lambda^{(\alpha)} (t)} \colon 
		{\mathcal R} (I) \hookrightarrow 
		{\mathcal R} \bigl( I (\alpha , t) \bigr)
		\; .
	\end{equation}
We can now explore the fact that on the half-circle $I_\alpha$ the automorphism 
$\widehat \alpha^{\rm int}_{\Lambda^{(\alpha)} (t)}$ is unitarily implemented 
by~${\rm e}^{i tH^{(\alpha)}}$, where
$H^{(\alpha)}=\overline{L^{(\alpha)} + V^{(\alpha)} }$ with
	\[
		 V^{(\alpha)}=\int_{I_\alpha} r {\rm d} \psi \, \cos( \psi - \alpha)  \;   
		 {:}{\mathscr P}(\Phi^{\rm os} (0,\psi)){:}_{C_0} \; .
	\]
Trotter's product formula yields 
	\[
	{\rm e}^{i t H^{(\alpha)} }= s-\lim_{n\to\infty} 
	\left({\rm e}^{i tL^{(\alpha)} /n}{\rm e}^{i tV^{(\alpha)}/n}
	\right)^{n} \; . 
	\]
Hence
	\begin{equation}
	 \widehat \alpha^{\rm int}_{\Lambda^{(\alpha)} (t)} (A)
	 = s-\lim_{n\to \infty} \left( \widehat\alpha^{\circ}_{\Lambda^{(\alpha)} (t/n)}\circ
	\widehat {\gamma}^{(\alpha)}_{t/n} \right)^{n}(A) \; , 
	\qquad A\in {\mathcal R} (I_\alpha) \; ,
	\label{e6.1e}
	\end{equation} 
with 
	\[ 
	\widehat {\gamma}^{(\alpha)}_t  (A) = {\rm e}^{i  tV^{(\alpha)}  }A{\rm e}^{-i tV^{(\alpha)} } \; . 
	\]
Note that $\widehat {\gamma}^{(\alpha)}$ has zero propagation 
speed \cite{GJcp}, as for every open interval ${J} \subset I_\alpha$ 
there exists $V_{\rm loc}^{(\alpha)}$ affiliated\footnote{Let ${\mathcal R}$ be a von Neumann algebra acting on a Hilbert space 
${\mathcal H}$. A closed and densely defined operator $A$ is said to be affiliated with ${\mathcal R}$
if $A$ commutes with every unitary operator $U$ in the commutant of ${\mathcal R}$. }
to ${\mathcal U} ({J})$ such that  for all $ t \in {\mathbb R}  $
	\[
		{\rm e}^{it V^{(\alpha)}} A {\rm e}^{-it V^{(\alpha)}}
		= {\rm e}^{it V_{\rm loc}^{(\alpha)}} A {\rm e}^{-it V_{\rm loc}^{(\alpha)}} \; , 
		\qquad   A \in {\mathcal R}  ({J})   \; . 
	\]
Consequently,  
	\begin{equation}
	\label{zeroprop}
		\widehat {\gamma}^{(\alpha)}_t ({\mathcal R} (J)) 
		= {\mathcal R} (J)  
	\qquad  \forall t  \in {\mathbb R}  \; . 
	\end{equation}
Now (\ref{e6.1ef}) follows from \eqref{e6.1e} and  \eqref{e6.1d}.
\end{proof} 

\begin{theorem} \label{p6.1b}
For $I \subset S^1 $, let ${\mathcal B}^{(\alpha)}_{r}(I)$  
denote the
von Neumann algebra generated by 
	\[
		\left\{ \widehat \alpha^{\rm \, int}_{\Lambda^{(\alpha)} (t )} (A) 
		\mid A\in {\mathcal U} (I), \; |t |<r \right\}.
	\]
Then
	\begin{equation}
		\label{intersectBa}
		\bigcap_{r>0} {\mathcal B}^{(\alpha)}_{r}(I)
		= {\mathcal R}(I) \; , \qquad I  \subset S^1 \; .
	\end{equation}
Both sides in \eqref{intersectBa} are independent of $\alpha$.
\end{theorem}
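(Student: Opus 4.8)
\textbf{Proof plan for Theorem \ref{p6.1b}.}

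The plan is to mimic the argument for Theorem \ref{p6.1}, replacing the free automorphisms $\widehat\alpha^{\circ}_{\Lambda^{(\alpha)}(t)}$ by the interacting ones $\widehat\alpha^{\rm\,int}_{\Lambda^{(\alpha)}(t)}$. The two key structural inputs are now available: finite speed of light for the ${\mathscr P}(\varphi)_2$ model (Theorem \ref{fst-2-theorem}) and the regularity-from-inside/outside of the local time-zero algebras ${\mathcal R}(I)$ (Proposition \ref{l6.1}), together with the fact, established in Theorem \ref{p6.1}, that ${\mathcal R}(I)={\mathfrak W}(\widehat{\mathfrak h}(I))''$ is independent of $\alpha$.

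First I would prove the inclusion $\bigcap_{r>0}{\mathcal B}^{(\alpha)}_{r}(I)\subset{\mathcal R}(I)$. Since ${\mathcal U}(I)\subset{\mathcal R}(I)$ and, by Theorem \ref{fst-2-theorem}, $\widehat\alpha^{\rm\,int}_{\Lambda^{(\alpha)}(t)}\colon{\mathcal R}(I)\hookrightarrow{\mathcal R}(I(\alpha,t))$, one gets ${\mathcal B}^{(\alpha)}_{r}(I)\subset{\mathcal R}\bigl(I(\alpha,r)\bigr)$ for every $r>0$. As $r\downarrow 0$ one has $I(\alpha,r)\downarrow I$ (for $I$ in the interior of a half-circle this is immediate from Proposition \ref{ialpha}, and the growth rate is at most the speed of light; the general case follows by covering $I$ by such intervals and using additivity). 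Regularity from the outside, $\bigcap_{J\supset\overline I}{\mathcal R}(J)={\mathcal R}(I)$ (Proposition \ref{l6.1} $i.)$), then yields $\bigcap_{r>0}{\mathcal B}^{(\alpha)}_{r}(I)\subset{\mathcal R}(I)$.

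For the reverse inclusion ${\mathcal R}(I)\subset\bigcap_{r>0}{\mathcal B}^{(\alpha)}_{r}(I)$, by regularity from the inside it suffices, for each $\overline J\subset I$, to find $r\ll 1$ with ${\mathcal R}(J)\subset{\mathcal B}^{(\alpha)}_{r}(I)$. Here is where the interacting case departs from the free one: I cannot differentiate $\widehat\alpha^{\rm\,int}_{\Lambda^{(\alpha)}(t)}$ on Weyl operators directly, since $H^{(\alpha)}$ contains the interaction $V^{(\alpha)}$. Instead I would use the Trotter decomposition \eqref{e6.1e}, $\widehat\alpha^{\rm\,int}_{\Lambda^{(\alpha)}(t)}(A)=s\text{-}\lim_{n\to\infty}\bigl(\widehat\alpha^{\circ}_{\Lambda^{(\alpha)}(t/n)}\circ\widehat\gamma^{(\alpha)}_{t/n}\bigr)^{n}(A)$, valid for $A\in{\mathcal R}(I_\alpha)$, together with the zero-propagation-speed property \eqref{zeroprop} of $\widehat\gamma^{(\alpha)}$: for $J\subset I$ one has $\widehat\gamma^{(\alpha)}_{t}({\mathcal R}(J))={\mathcal R}(J)$. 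Choosing $0<r<\tfrac12{\rm dist}(J,I^{c})$, each factor $\widehat\alpha^{\circ}_{\Lambda^{(\alpha)}(t/n)}\circ\widehat\gamma^{(\alpha)}_{t/n}$ maps ${\mathcal U}(J)$ into ${\mathcal R}(I(\alpha,t/n))\subseteq{\mathcal B}^{(\alpha)}_{r}(I)$, and the iterated composition keeps the image inside ${\mathcal B}^{(\alpha)}_{r}(I)$ as long as the accumulated propagation stays within $I$; taking the strong limit and using that ${\mathcal B}^{(\alpha)}_{r}(I)$ is weakly closed gives $\widehat\alpha^{\rm\,int}_{\Lambda^{(\alpha)}(t)}({\mathcal U}(J))\subset{\mathcal B}^{(\alpha)}_{r}(I)$ for $|t|<r$, hence ${\mathcal B}^{(\alpha)}_{r}(I)$ contains $\widehat\alpha^{\rm\,int}_{\Lambda^{(\alpha)}(t)}({\mathcal U}(J))$ and therefore, since these generate ${\mathcal R}(J)$ in the limit $r\downarrow 0$ via the already-proven first inclusion applied to $J$, we get ${\mathcal R}(J)\subset\bigcap_{r>0}{\mathcal B}^{(\alpha)}_{r}(I)$. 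The $\alpha$-independence of both sides is then inherited: the right-hand side is $\alpha$-independent by Theorem \ref{p6.1}, and the left-hand side equals it.

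The main obstacle I expect is the bookkeeping in the Trotter step: one must control that the finite-$n$ compositions of $\widehat\alpha^{\circ}_{\Lambda^{(\alpha)}(t/n)}$ (which spread support) interleaved with $\widehat\gamma^{(\alpha)}_{t/n}$ (which does not) land in ${\mathcal B}^{(\alpha)}_{r}(I)$ uniformly in $n$, i.e.\ that the total support spread after $n$ steps of size $t/n$ is bounded by $I(\alpha,t)$ and does not accumulate — this is exactly the discrete analogue of the continuous finite-speed estimate and should follow from the cocycle/monotonicity properties of $t\mapsto I(\alpha,t)$ in Proposition \ref{ialpha}, but it requires care to make the limit argument rigorous against the weak topology. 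Everything else is a transcription of the proof of Theorem \ref{p6.1} with Theorem \ref{fst-2-theorem} in place of Theorem \ref{fst-theorem}.
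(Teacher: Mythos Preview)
Your first inclusion $\bigcap_{r>0}\mathcal{B}^{(\alpha)}_r(I)\subset\mathcal{R}(I)$ is correct and matches the paper's argument exactly.

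The reverse inclusion has a genuine gap: the Trotter formula you invoke, \eqref{e6.1e}, goes in the wrong direction. It expresses the \emph{interacting} boost $\widehat\alpha^{\rm\,int}_{\Lambda^{(\alpha)}(t)}$ as a limit of compositions of \emph{free} boosts and $\widehat\gamma^{(\alpha)}$-twists. But the task is to show that $\mathcal{R}(J)$ --- which by Theorem~\ref{p6.1} is generated by \emph{free} boosts of $\mathcal{U}(J)$ --- lies inside $\mathcal{B}^{(\alpha)}_r(I)$, which is generated by \emph{interacting} boosts of $\mathcal{U}(I)$. Your intermediate conclusion $\widehat\alpha^{\rm\,int}_{\Lambda^{(\alpha)}(t)}(\mathcal{U}(J))\subset\mathcal{B}^{(\alpha)}_r(I)$ is trivially true by definition (since $\mathcal{U}(J)\subset\mathcal{U}(I)$ and $|t|<r$), so nothing has been gained. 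The final bootstrap --- ``since these generate $\mathcal{R}(J)$ \ldots\ via the already-proven first inclusion applied to $J$'' --- is circular: the first inclusion gives only $\bigcap_{r}\mathcal{B}^{(\alpha)}_r(J)\subset\mathcal{R}(J)$, whereas you would need the reverse, which is exactly the statement being proved. The claim $\mathcal{R}(I(\alpha,t/n))\subseteq\mathcal{B}^{(\alpha)}_r(I)$ suffers the same circularity.

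The paper runs the Trotter argument in the opposite direction. It writes $L^{(\alpha)}=H^{(\alpha)}-V^{(\alpha)}$, approximates the unbounded $V^{(\alpha)}$ by the bounded spectral cutoff $V^{(\alpha)}_n=V^{(\alpha)}\mathbb{1}_{\{|V^{(\alpha)}|\le n\}}$, so that $H^{(\alpha)}_n\doteq H^{(\alpha)}-V^{(\alpha)}_n$ converges in strong resolvent sense to $L^{(\alpha)}$, and then Trotter-expands $e^{itH^{(\alpha)}_n}$ as $s\text{-}\lim_{p}\bigl(e^{itH^{(\alpha)}/p}e^{-itV^{(\alpha)}_n/p}\bigr)^p$. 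This expresses the \emph{free} boost of $A\in\mathcal{R}(J)$ as a double limit of iterated compositions $\bigl(\widehat\alpha^{\rm\,int}_{\Lambda^{(\alpha)}(t/p)}\circ\widehat\gamma^{(\alpha)}_n(-t/p)\bigr)^p(A)$, i.e.\ interacting boosts interleaved with conjugations by $e^{-isV^{(\alpha)}_n}\in\mathcal{U}(I_\alpha)$. Zero propagation speed for the $\widehat\gamma^{(\alpha)}_n$ keeps the finite approximants inside $\mathcal{B}^{(\alpha)}_r(I)$, and weak closedness of the latter then yields $e^{itL^{(\alpha)}}Ae^{-itL^{(\alpha)}}\in\mathcal{B}^{(\alpha)}_r(I)$; in particular $A\in\mathcal{B}^{(\alpha)}_r(I)$, hence $\mathcal{R}(J)\subset\mathcal{B}^{(\alpha)}_r(I)$.
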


\begin{proof} 
Let us first prove that $\bigcap_{r>0} {\mathcal B}^{(\alpha)}_{r}(I)\subset {\mathcal R}(I)$. 
Using
(\ref{e6.1f})  and ${\mathcal U}(I)\subset {\mathcal R}(I)$, we see that 
	\[
	 {\mathcal B}^{(\alpha)}_{r}(I)\subset {\mathcal R} \bigl( I (\alpha, r) \bigr)
	 \qquad r >0 \; .
	 \]
According to Proposition \ref{l6.1}, the local time-zero algebras are regular from 
the outside. This
implies~$\bigcap_{r>0} {\mathcal B}^{(\alpha)}_{r}(I)\subset {\mathcal R}(I)$.

Let us now prove that ${\mathcal R}(I)\subset \bigcap_{r>0} {\mathcal B}^{(\alpha)}_{r}(I)$. 
Using again Proposition \ref{l6.1} (this time using that the local time-zero algebras are regular from 
the inside),  it suffices to show that 
for each~$\overline{J}\subset
I$ there exists some positive real number $r \ll 1$ such that  
	\begin{equation}
		\label{e6.02b}
		{\mathcal R}({J})\subset {\mathcal B}^{(\alpha)}_{r}(I) \; .
	\end{equation}
To this end we fix ${J}$ and $I$ with $\overline{{J}}\subset I$ and set
$\delta={\frac{1}{2}}{\rm dist}({J}, I^{c})$. For $|t|\leq \delta$ the unitary group 
${\rm e}^{i t H^{(\alpha)} }$  with 
	\[
	H^{(\alpha)}=\overline{L^{(\alpha)} + V^{(\alpha)}}
	\]
induces the correct dynamics $\widehat \alpha^{\rm \, int}_{\Lambda^{(\alpha)} (t)}$ 
on ${\mathcal R} ({J})$.  Apply \cite[Proposition 2.5]{GeJII} to obtain 
	\[
		{\rm e}^{i tH^{(\alpha)} }=s-\lim_{n\to \infty}{\rm e}^{i tH^{(\alpha)}_n } \; ,  \qquad \: t\in {\mathbb R} \; ,
	\]
for $H^{(\alpha)}_n 
=\overline{L^{(\alpha)} + V^{(\alpha)}- V^{(\alpha)}_{n}}$, where $V^{(\alpha)}_n =
V^{(\alpha)} \mathbb{1}_{\{|V^{(\alpha)} |\leq n\}}$. Since $V^{(\alpha)}_n$ is bounded,
	\[
	H^{(\alpha)}_n
	= \overline{L^{(\alpha)} +V^{(\alpha)}- V^{(\alpha)}_n}
	= H^{(\alpha)}- V^{(\alpha)}_n \;  
	\]
and Trotter's formula yields
	\[
 	{\rm e}^{i tH^{(\alpha)}_n }= s-\lim_{p\to \infty} \left({\rm e}^{i t H^{(\alpha)}/p}{\rm e}^{-i t
		V^{(\alpha)}_n /p} \right)^{p} \; .
	\]
Hence, for $A\in {\mathcal R}({J})$,
	\begin{align*}
	& {\rm e}^{i tL^{(\alpha)} }A{\rm e}^{-i tL^{(\alpha)} } \\
	& \quad = s-\lim_{n\to \infty}s-\lim_{p\to \infty}
	\left({\rm e}^{i t H^{(\alpha)} /p}{\rm e}^{-i t V^{(\alpha)}_n/p} \bigr)^{p}A \bigl({\rm e}^{i t
	V^{(\alpha)}_n/p}{\rm e}^{-i t H^{(\alpha)} /p}\right)^{p} \; .
	\end{align*}
For $|t|<r$ and $p \in {\mathbb N}$
	\begin{multline*}
		\left({\rm e}^{i t H^{(\alpha)} /p}
		{\rm e}^{-i t V^{(\alpha)}_n /p} \right)^{p}
		A 
		\left({\rm e}^{i t V^{(\alpha)}_n /p}{\rm e}^{-i t H^{(\alpha)} /p}\right)^{p} = 
		 \\
		= \left( \widehat \alpha^{\rm \,  int}_{\Lambda^{(\alpha)} (t/p)} 
		\circ \widehat {\gamma}^{(\alpha)}_n (-t/p) \right)^{p}(A) \; ,
	\end{multline*}
where $\widehat {\gamma}^{(\alpha)}_n$ is the dynamics implemented by the unitary
group $t \mapsto {\rm e}^{-i t V^{(\alpha)}_n }$. This implies 
	\[
	\bigl( \widehat \alpha^{\rm \,  int}_{\Lambda^{(\alpha)} (t/p)} \circ 
	\widehat {\gamma}^{(\alpha)}_n (t/p) \bigr)^{p}(A)\in {\mathcal B}^{(\alpha)}_{r}(I) \; ,  
	\qquad |t|<r \; , \quad p \in {\mathbb N} \; . 
	\]
Take the limit $n \to \infty$ and recall from \eqref{zeroprop} that $ \widehat {\gamma}
= \lim_{n \to \infty} \widehat {\gamma}_n$ has zero propagation speed. 
Since~${\mathcal B}^{(\alpha)}_{r}(I)$ is weakly closed, we obtain (\ref{e6.02b}).
The result now follows from Proposition \ref{l6.1}. 
\end{proof} 

\begin{remark} Thus
	\[
		{\mathcal R}_{\rm int} (I)= {\mathcal R} (I) \; \qquad 
		\forall I \subset S^1 \; . 
	\]
We have seen earlier that ${\mathcal R}_{\rm int} (I_\alpha)= {\mathcal R}  (I_\alpha)$  for all 
half-circles $I_\alpha$. If $I$ is contained in some half-circle, then it follows that 
	\[
		{\mathcal R}_{\rm int} (I) = \bigcap_{I \subset I_\alpha} {\mathcal R}_{\rm int}  (I_\alpha)
	\]
can be identified with the intersection 
of all algebras ${\mathcal R}_{\rm int}  (I_\alpha)$ associated to the half-circles $I_\alpha$, which contain $I$.
\end{remark}

\section{The stress-energy tensor}

One may introducing canonical time-zero fields $\varphi$ and canonical momenta $\pi$: they can be defined in 
terms of the Fock fields $\Phi_F $ on $\Gamma \bigl( \widehat{\mathfrak h}(S^1)\bigr) $ (see, \emph{e.g.}, \cite{RS}): 
	\[
		\widetilde \varphi (h) \doteq \Phi_F (h) \;, 
		\qquad 
		\widetilde \pi (g) \doteq  \Phi_F (i \omega g)\; , 
		\qquad h, g \in \widehat{\mathfrak h}(S^1, \mathbb{R} ) \; . 
	\]
Thus $\widetilde \varphi (h) = \Phi^{\rm os} (0,h) $ and  
$ \widetilde \pi (g) = -i [ \Gamma (\omega ), \Phi^{\rm os} (0,g)]$.  
They satisfy the {\em canonical commutation relations}
	\begin{align*}
		[ \widetilde \varphi(\psi), \widetilde \pi (\psi') ] 
			&=    \frac{i}{r}   \delta (\psi - \psi')\;  , 
			\\ 
		[ \widetilde \varphi(\psi), \widetilde \varphi(\psi') ] 
		&=   [ \widetilde \pi(\psi), \widetilde \pi (\psi') ] = 0 \; , 
	\end{align*}
in the sense of quadratic forms on $\Gamma \bigl( \widehat{\mathfrak h} (S^1)\bigr) $. 

\goodbreak
However, it is now more convenient to work on the Fock space over $L^2 (S^1, r {\rm d} \psi)$, using the map
	\[
		\widehat{\mathfrak h} (S^1) \ni f \mapsto \frac{1}{\sqrt{2 \omega}} f \in L^2 (S^1, r {\rm d} \psi)   
	\]
to identify the two realisations of the Fock space. The canonical fields and the canonical momenta 
take the form 
	\begin{align*}
		\varphi(\psi) & =  \frac{1}{\sqrt{2}}
						\Big(\big(\omega^{-\frac{1}{2}} a\big)(\psi)^* + \big(\omega^{-\frac{1}{2}} a\big)(\psi)\Big)  \;,   \\
		\pi(\psi) & =  \frac{i}{\sqrt{2}}   \Big(\big(  \omega^{\frac{1}{2}}  a\big)(\psi)^* - \big(  \omega^{\frac{1}{2}}  
		a\big)(\psi)\Big)
	\end{align*}
with
	\[ 
		a(\psi) \doteq \sum_{k\in\mathbb{Z}} \frac{e^{-ik\psi}}{\sqrt{2\pi r}} a_k
			\quad \text{and} \quad
		a(\psi)^* \doteq \sum_{k\in\mathbb{Z}} \frac{e^{ik\psi}}{\sqrt{2\pi r}} a_k^* \; . 
	\]
Thus
	\[
		\big(\omega^{\pm \frac{1}{2}} a\big)(\psi) 
		=\sum_{k\in\mathbb{Z}} \frac{ \widetilde \omega(k)^{\pm \frac{1}{2}} e^{-ik\psi}}{\sqrt{2\pi r}}  a_k
			\quad \text{and} \quad
		\big(\omega^{\pm \frac{1}{2}} a\big)(\psi)^* 
		= \sum_{k\in\mathbb{Z}} \frac{ \widetilde \omega(k)^{\pm \frac{1}{2}}  e^{ik\psi}}{\sqrt{2\pi r}} a_k^*  
	\]
and  $\big[\pi(\psi'), \; \varphi(\psi)\big]=-   \frac{i}{r}  \delta(\psi-\psi')$ still holds.
Using
	\begin{align*}
		a(\psi) & =  \frac{1}{\sqrt{2}} \Big[ \big(\omega^{\frac{1}{2}}\varphi\big)(\psi) -i \big(\omega^{-\frac{1}{2}}\pi\big)(\psi)  \Big]  \;,   \\
		a(\psi)^* & =  \frac{1}{\sqrt{2}} \Big[ \big(\omega^{\frac{1}{2}}\varphi\big)(\psi) +i \big(\omega^{-\frac{1}{2}}\pi\big)(\psi) \Big]  \;.
	\end{align*}
one verifies that 
	\[
		\big[a(\psi')^*, \; a(\psi)\big] =  \frac{i}{2}  \left\{  \big[\pi(\psi'), \; \varphi(\psi)\big]
								+  \big[\pi(\psi), \; \varphi(\psi')\big]  \right\}
							=  \frac{1}{r} \delta(\psi-\psi')   
	\]
and $\big[a(\psi')^*, \; a(\psi)^*\big]=\big[a(\psi'), \; a(\psi)\big]=0$.

\begin{lemma} Consider the Fock space over $L^2 (S^1, {\rm d} \psi)$. It follows that\footnote{We note that normal ordering is not need 
at this point.} 
	\begin{align*}
		L^{(\alpha)} &= {\rm d}\Gamma 
		\bigl( \sqrt{ \omega } \, r \cos_{\psi + \alpha} \sqrt{\omega }  \bigr)  \\
		& =
		 \frac{1}{2} 
			\int_{S^1} r \cos (\psi+ \alpha) \, {\rm d} \psi \;   \Bigl( \pi^2 (\psi) + \tfrac{1}{r^2}
			\bigl(\tfrac{ \partial \varphi }{\partial \psi}\bigr)^2 (\psi) + \mu^2 \varphi^2 (\psi)  \Bigr) 
	\end{align*}
is the generator of the free boost $t \mapsto U (\Lambda^{(\alpha)} (t))$ first introduced\footnote{Note that 
Definition~\ref{weak-Ext} refers to the original Fock space
$\Gamma \bigl(\widehat{\mathfrak h} (S^1)\bigr)$.}  in Definition~\ref{weak-Ext}. 
\end{lemma}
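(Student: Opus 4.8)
The plan is to verify that both displayed expressions for $L^{(\alpha)}$ coincide on a common core, and that they coincide with the second-quantized operator ${\rm d}\Gamma(\sqrt{\omega}\, r\cos_{\psi+\alpha}\sqrt{\omega})$ established in Proposition~\ref{propfreeboost}. The key observation is that Proposition~\ref{propfreeboost} already identifies the generator of the free boost $t\mapsto U(\Lambda^{(\alpha)}(t))$ with ${\rm d}\Gamma(\omega r\,\cos_{\psi+\alpha})$ on the Fock space $\Gamma(\widehat{\mathfrak h}(S^1))$; after transporting to the Fock space over $L^2(S^1, r\,{\rm d}\psi)$ via the unitary $\widehat{\mathfrak h}(S^1)\ni f\mapsto \tfrac{1}{\sqrt{2\omega}}f\in L^2(S^1, r\,{\rm d}\psi)$, the one-particle operator $\omega r\,\cos_{\psi+\alpha}$ (which is self-adjoint on $\widehat{\mathfrak h}(S^1)$ w.r.t.\ the scalar product $\langle\cdot,\tfrac{1}{2\omega}\cdot\rangle_{L^2}$, as noted in the remark after Proposition~\ref{propfreeboost}) is conjugated into $\sqrt{\omega}\, r\cos_{\psi+\alpha}\sqrt{\omega}$, which is manifestly self-adjoint on $L^2(S^1, r\,{\rm d}\psi)$. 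Hence the first equality in the Lemma is essentially a restatement of Proposition~\ref{propfreeboost} under a unitary identification, and the main content is the second equality, the explicit ``stress-energy'' form.

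\textbf{Key steps.} First I would record the unitary $W\colon \Gamma(\widehat{\mathfrak h}(S^1))\to\Gamma(L^2(S^1, r\,{\rm d}\psi))$ induced by $f\mapsto (2\omega)^{-1/2}f$, and check $W\,{\rm d}\Gamma(\omega r\cos_{\psi+\alpha})\,W^{-1}={\rm d}\Gamma\big((2\omega)^{-1/2}\,\omega r\cos_{\psi+\alpha}\,(2\omega)^{1/2}\big)={\rm d}\Gamma(\sqrt{\omega}\,r\cos_{\psi+\alpha}\sqrt{\omega})$, where the middle equality uses that conjugating a one-particle operator by $(2\omega)^{-1/2}$ on the closed operator $\omega r\cos_{\psi+\alpha}$ gives $(2\omega)^{-1/2}(\omega r\cos_{\psi+\alpha})(2\omega)^{1/2}=\sqrt{\omega}\,r\cos_{\psi+\alpha}\sqrt{\omega}$ (the factors of $2$ cancel). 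Second, using the explicit formulas for $a(\psi),a(\psi)^*$ in terms of $\varphi,\pi$ just stated in the excerpt, namely $a(\psi)=\tfrac{1}{\sqrt 2}\big((\omega^{1/2}\varphi)(\psi)-i(\omega^{-1/2}\pi)(\psi)\big)$ and its adjoint, I would expand
\[
{\rm d}\Gamma\big(\sqrt{\omega}\,r\cos_{\psi+\alpha}\sqrt{\omega}\big)
=\int_{S^1} r\cos(\psi+\alpha)\,{\rm d}\psi\;\big(\omega^{1/2}a\big)(\psi)^*\big(\omega^{1/2}a\big)(\psi)
\]
and substitute. Third, I would compute $(\omega^{1/2}a)(\psi)^*(\omega^{1/2}a)(\psi)$ in terms of $\varphi,\pi$: writing $(\omega^{1/2}a)(\psi)=\tfrac{1}{\sqrt2}\big((\omega\varphi)(\psi)-i\pi(\psi)\big)$, one gets
\[
\big(\omega^{1/2}a\big)^*\big(\omega^{1/2}a\big)
=\tfrac12\Big(\pi^2+(\omega\varphi)^2 + i[\pi,\omega\varphi]\Big),
\]
and after integrating $r\cos(\psi+\alpha)$ against this, the commutator term $i[\pi,\omega\varphi]$ is a $c$-number (a multiple of $\mathbb{1}$) which one discards by the usual Wick-ordering/constant-subtraction convention implicit in the definition of ${\rm d}\Gamma$ (hence the footnote ``normal ordering is not needed'', meaning the $c$-number is simply zero after the Fock-vacuum subtraction built into ${\rm d}\Gamma$). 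Finally, I would use $\omega^2=\tfrac{1}{r^2}(-\partial_\psi^2)+\mu^2$ on the relevant domain — this is exactly the content of relations \eqref{eq:useful-1}–\eqref{eq:useful-2}, i.e.\ $\tfrac12(\widetilde\omega(k)\widetilde\omega(k+1)+\widetilde\omega(k)\widetilde\omega(k-1))=\tfrac{k^2}{r^2}+\mu^2$, which by polarization gives $(\omega\varphi)(\psi)\cdot(\omega\varphi)(\psi)$ integrating to $\int r\cos(\psi+\alpha)\big(\tfrac1{r^2}(\partial_\psi\varphi)^2+\mu^2\varphi^2\big)$ after one integration by parts in $\psi$ (the boundary terms vanish on $S^1$, and $\cos(\psi+\alpha)$ being smooth causes no trouble). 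Assembling the three pieces yields the claimed formula.

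\textbf{Main obstacle.} The delicate point is the integration by parts turning $\int r\cos(\psi+\alpha)\,(\omega\varphi)^2$ into $\int r\cos(\psi+\alpha)\big(\tfrac1{r^2}(\partial_\psi\varphi)^2+\mu^2\varphi^2\big)$ when $\omega$ is a genuine pseudodifferential operator (not simply $\sqrt{-\partial_\psi^2/r^2+\mu^2}$, since $\widetilde\omega(k)$ is given by the $\Gamma$-function ratio \eqref{eq:omega-1}), and the multiplication operator $\cos_{\psi+\alpha}$ does not commute with $\omega$. The resolution is that one does \emph{not} need $\omega^2$ to equal $-\partial_\psi^2/r^2+\mu^2$ as operators; one only needs the quadratic-form identity obtained by sandwiching between one-particle vectors and using the matrix-element relation $\tfrac12\widetilde\omega(k)(\widetilde\omega(k-1)+\widetilde\omega(k+1))=k^2/r^2+\mu^2$ together with $\langle e_k, \cos_{\psi+\alpha} e_{k'}\rangle$ being supported on $|k-k'|=1$, so that $\sqrt\omega\cos_{\psi+\alpha}\sqrt\omega$ has matrix elements $\tfrac12(\widetilde\omega(k)+\widetilde\omega(k'))\langle e_k,\cos_{\psi+\alpha}e_{k'}\rangle$ for $k'=k\pm1$ — and this is exactly the matrix element of $\tfrac12(\pi\cos_{\psi+\alpha}\pi$-part $+$ $\tfrac1{r^2}\partial_\psi\cos_{\psi+\alpha}\partial_\psi$-part $+\mu^2\cos_{\psi+\alpha}$-part$)$. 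So the ``integration by parts'' is really the algebraic identity $\tfrac12(\widetilde\omega(k)+\widetilde\omega(k\pm1))\cdot\widetilde\omega(k)\widetilde\omega(k\pm1)$-bookkeeping hidden in \eqref{eq:useful-1}–\eqref{eq:useful-2}, which I would carry out on the orthonormal basis $\{e_k\}$ to avoid any domain subtleties, then extend to the stated core by continuity.
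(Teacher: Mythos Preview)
Your overall strategy is sound and is, in essence, the same computation as the paper's but organized in the reverse direction. The paper expands the stress-energy form directly in creation and annihilation operators $a_k,a_k^*$, observes that the $a^*a^*$ and $aa$ contributions come with coefficients
\[
A=\widetilde\omega(k)\widetilde\omega(k-1)+r^{-2}(-k+1)k-\mu^2,\qquad
B=\widetilde\omega(k)\widetilde\omega(k+1)+r^{-2}(-k-1)k-\mu^2,
\]
and uses \eqref{eq:useful-1}--\eqref{eq:useful-2} to see $A=B=0$; the surviving $a_k^*a_{k\pm1}$ terms are then matched against a direct computation of $\int r\,{\rm d}\psi\,(\omega^{1/2}a)^*\cos_\psi(\omega^{1/2}a)$. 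Your route---starting from $d\Gamma(\sqrt\omega\,r\cos_{\psi+\alpha}\sqrt\omega)$, rewriting it as $\tfrac12\int r\cos(\psi+\alpha)\bigl(\pi^2+(\omega\varphi)^2\bigr)$ plus a $\psi$-independent c-number that integrates to zero against $\cos(\psi+\alpha)$, and then replacing $(\omega\varphi)^2$ by $r^{-2}(\partial_\psi\varphi)^2+\mu^2\varphi^2$ under the integral---works for the same reason: since $\cos(\psi+\alpha)$ couples only nearest-neighbour modes, every term (including the $a^*a^*$ and $aa$ pieces) reduces to the identity $\widetilde\omega(k)\widetilde\omega(k\pm1)=r^{-2}k(k\pm1)+\mu^2$. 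Both arguments use exactly the same algebraic input; yours separates the $\pi^2$ part (automatic) from the $(\omega\varphi)^2\leftrightarrow r^{-2}(\partial_\psi\varphi)^2+\mu^2\varphi^2$ part more cleanly, and correctly explains the footnote about normal ordering (the weight $\cos(\psi+\alpha)$ has no zero Fourier mode, so all c-numbers drop out).

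There is one slip in your ``Main obstacle'' paragraph: the matrix element of $\sqrt\omega\cos_{\psi+\alpha}\sqrt\omega$ between $e_k$ and $e_{k'}$ is $\sqrt{\widetilde\omega(k)\widetilde\omega(k')}\,\langle e_k,\cos_{\psi+\alpha}e_{k'}\rangle$, not $\tfrac12(\widetilde\omega(k)+\widetilde\omega(k'))\langle e_k,\cos_{\psi+\alpha}e_{k'}\rangle$. Correspondingly, the identity you actually need is the \emph{unaveraged} one $\widetilde\omega(k)\widetilde\omega(k\pm1)=r^{-2}k(k\pm1)+\mu^2$ from \eqref{eq:useful-1}--\eqref{eq:useful-2}, not the averaged relation \eqref{eq:tilde-omega-and-the-traditional-dispersion-relation}. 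With that correction the Fourier-mode bookkeeping goes through exactly as you describe, and no further ``integration by parts'' against the pseudodifferential $\omega$ is needed.
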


\begin{proof}
We write, using the fact that $\widetilde \omega(k)=\widetilde \omega(-k)$,
	\begin{align*}
		\varphi(\psi) & =  \frac{1}{\sqrt{4\pi}} \sum_{k\in\mathbb{Z}} 
						\widetilde \omega(k)^{-\frac{1}{2}}\Big( e^{ik\psi}a_k^* + e^{-ik\psi}a_k \Big) \; , \\
		\tfrac{\partial \varphi}{\partial\psi}(\psi) & = \frac{i}{\sqrt{4\pi}} \sum_{k\in\mathbb{Z}} 
						\widetilde \omega(k)^{-\frac{1}{2}}k\Big( e^{ik\psi}a_k^* - e^{-ik\psi}a_k \Big) \; , \\
		\pi(\psi) & =  \frac{i}{\sqrt{4\pi}} \sum_{k\in\mathbb{Z}} 
						  \widetilde \omega(k)^{\frac{1}{2}} 
						 \Big( e^{ik\psi}a_k^* - e^{-ik\psi}a_k \Big) \; . 
	\end{align*}
One has
	\begin{align*}
		\mu^2 \varphi(\psi)^2 & =  \frac{\mu^2}{4\pi} \sum_{k\in\mathbb{Z}} \sum_{l\in\mathbb{Z}} 
						\widetilde \omega(k)^{-\frac{1}{2}}\widetilde \omega(l)^{-\frac{1}{2}} \\
					& \quad \times \Big( e^{i(k+l)\psi}a_k^*a_l^* + e^{i(k-l)\psi} a_k^* a_l 
							+ e^{-i(k-l)\psi} a_k a_l^* + e^{-i(k+l)\psi} a_k a_l \Big) \\
	\tfrac{1}{r^2} \bigl(\tfrac{ \partial \varphi }{\partial \psi}\bigr)^2 (\psi) & = 
			- \frac{1}{4\pi r^2} \sum_{k\in\mathbb{Z}}  \sum_{l\in\mathbb{Z}} 
			\widetilde \omega(k)^{-\frac{1}{2}}\widetilde \omega(l)^{-\frac{1}{2}}\, kl \\
					& \quad \times 
				\Big( e^{i(k+l)\psi}a_k^*a_l^* - e^{i(k-l)\psi} a_k^* a_l - e^{-i(k-l)\psi} a_k a_l^* 
						+ e^{-i(k+l)\psi} a_k a_l \Big)  \\
		\pi(\psi)^2 & = - \frac{1}{4\pi  } \sum_{k\in\mathbb{Z}}  \sum_{l\in\mathbb{Z}} 
			\widetilde \omega(k)^{\frac{1}{2}}\widetilde \omega(l)^{\frac{1}{2}} \\
					& \quad \times 
				\Big( e^{i(k+l)\psi}a_k^*a_l^* - e^{i(k-l)\psi} a_k^* a_l - e^{-i(k-l)\psi} a_k a_l^* 
						+ e^{-i(k+l)\psi} a_k a_l \Big)
	\end{align*}
Next, define, for $j\in \mathbb{Z}$,
	\[ 
		S_j \doteq \int_{S^1}{\rm d} \psi \; \cos\psi e^{ij\psi} = \frac{1}{2}\int_{S^1}{\rm d}  \psi \; e^{i(j+1)\psi}
				+ \frac{1}{2}\int_{S^1}{\rm d} \psi \; e^{i(j-1)\psi} = \pi\big(\delta_{j,-1}+\delta_{j,1}\big) \;.
	\]
It is clear that $S_j=S_{-j}$ for all $j\in \mathbb{Z}$. Hence, we may write
	\begin{align*}
		\frac{1}{2} \int_{S^1} & r \,  {\rm d} \psi \; \cos\psi \; \pi(\psi)^2  = - \frac{r}{8\pi  } \sum_{k \in \mathbb{Z}}  \sum_{l\in\mathbb{Z}} 
				\widetilde \omega(k)^{\frac{1}{2}}\widetilde \omega(l)^{\frac{1}{2}} \\
			& \qquad \qquad \qquad \qquad \times
				\Big( S_{k+l}a_k^*a_l^* - S_{k-l} a_k^* a_l - S_{k-l} a_k a_l^* + S_{k+l} a_k a_l  \Big)  \\
			& = \frac{r}{8  } \sum_{k\in\mathbb{Z}}  \widetilde \omega(k)^{\frac{1}{2}}
				\Big[ -\widetilde \omega(-k+1)^{\frac{1}{2}} a_k^*a_{-k+1}^* 
					- \widetilde \omega(-k-1)^{\frac{1}{2}} a_k^*a_{-k-1}^* \\
			&  \qquad \qquad \qquad  \qquad + \widetilde \omega(k+1)^{\frac{1}{2}} a_k^*a_{k+1} 
			+ \widetilde \omega(k-1)^{\frac{1}{2}} a_k^*a_{k-1} \\
			&  \qquad \qquad \qquad  \qquad + \widetilde \omega(k+1)^{\frac{1}{2}} a_ka_{k+1}^* 
			+ \widetilde \omega(k-1)^{\frac{1}{2}} a_ka_{k-1}^* \\
			& \qquad \qquad \qquad  \qquad - \widetilde \omega(-k+1)^{\frac{1}{2}} a_ka_{-k+1} 
			- \widetilde \omega(-k-1)^{\frac{1}{2}} a_ka_{-k-1} \Big]
			\\
			& \\
		\frac{1}{2} \int_{S^1} r \, & {\rm d} \psi \; \cos\psi  \tfrac{1}{r^2} \bigl(\tfrac{ \partial \varphi }{\partial \psi}\bigr)^2 (\psi)  
			= - \frac{1}{8\pi r} \sum_{k\in\mathbb{Z}}  \sum_{l\in\mathbb{Z}} 
					\widetilde \omega(k)^{-\frac{1}{2}}\widetilde \omega(l)^{-\frac{1}{2}} \; kl \\
			& \qquad \qquad \qquad \qquad \times
						\Big(S_{k+l}a_k^*a_l^* - S_{k-l} a_k^* a_l - S_{k-l} a_k a_l^* + S_{k+l} a_k a_l \Big) \\
			& = \frac{1}{8r} \sum_{k\in\mathbb{Z}}  \widetilde \omega(k)^{-\frac{1}{2}}\, k \\
			& \qquad \times
				\Big[ -\widetilde \omega(-k+1)^{-\frac{1}{2}}(-k+1) a_k^*a_{-k+1}^* 
					- \widetilde \omega(-k-1)^{-\frac{1}{2}}(-k-1) a_k^*a_{-k-1}^* \\
			& 	\qquad \qquad 	+ \widetilde \omega(k+1)^{-\frac{1}{2}}(k+1) a_k^*a_{k+1} 
					+ \widetilde \omega(k-1)^{-\frac{1}{2}}(k-1) a_k^*a_{k-1}  \\
			& 	\qquad \qquad  + \widetilde \omega(k+1)^{-\frac{1}{2}}(k+1) a_ka_{k+1}^* 
				+ \widetilde \omega(k-1)^{-\frac{1}{2}}(k-1) a_ka_{k-1}^* \\
			& 	\qquad \qquad  - \widetilde \omega(-k+1)^{-\frac{1}{2}}(-k+1) a_ka_{-k+1} 
				- \widetilde \omega(-k-1)^{-\frac{1}{2}}(-k-1) a_ka_{-k-1} \Big]
	\end{align*}
	\begin{align*}
		\frac{\mu^2}{2} \int_{S^1} r \, &  {\rm d} \psi \; \cos\psi  \big(\varphi(\psi) \big)^2 \\
			& = \frac{\mu^2 r}{8\pi} \sum_{k\in\mathbb{Z}}  \sum_{l\in\mathbb{Z}} 
					\widetilde \omega(k)^{-\frac{1}{2}}\widetilde \omega(l)^{-\frac{1}{2}} \\
			& \qquad \qquad \qquad \times
					\Big( S_{k+l}a_k^*a_l^* + S_{k-l} a_k^* a_l + S_{k-l} a_k a_l^* + S_{k+l} a_k a_l \Big) \\
			& = \frac{r}{8} \sum_{k\in\mathbb{Z}}  \mu^2\widetilde \omega(k)^{-\frac{1}{2}}
			\\
			& \qquad \times
					\Big[ \widetilde \omega(-k+1)^{-\frac{1}{2}} a_k^*a_{-k+1}^* 
						+ \widetilde \omega(-k-1)^{-\frac{1}{2}} a_k^*a_{-k-1}^* \\
			& \qquad \qquad  + \widetilde \omega(k+1)^{-\frac{1}{2}} a_k^*a_{k+1} 
				+ \widetilde \omega(k-1)^{-\frac{1}{2}} a_k^*a_{k-1} \\
			& \qquad \qquad  + \widetilde \omega(k+1)^{-\frac{1}{2}} a_ka_{k+1}^* 
				+ \widetilde \omega(k-1)^{-\frac{1}{2}} a_ka_{k-1}^* \\
			& \qquad \qquad  + \widetilde \omega(-k+1)^{-\frac{1}{2}} a_ka_{-k+1} 
				+ \widetilde \omega(-k-1)^{-\frac{1}{2}} a_ka_{-k-1} \Big]
	\end{align*}
Rearranging the terms in order to join terms having factors involving
the operators $a$ in common and using the fact that $\widetilde \omega(-k)=\widetilde \omega(k)$ for all $k\in\mathbb{Z}$, 
we get
	\begin{align}
		\label{eq:almost-there}
	L_1 & = \frac{r}{8 } \sum_{k\in\mathbb{Z}}  
			\Bigg[
					-\widetilde \omega(k)^{-\frac{1}{2}}\widetilde \omega(k-1)^{-\frac{1}{2}}
						\underbrace{ \Big( \widetilde \omega(k)\widetilde \omega(k-1) + {r}^{-2}
						(-k+1)k -  \mu^2  \Big)}_{A} 
						a_k^*a_{-k+1}^* 
				\nonumber\\
			&    \qquad \qquad \quad
					-\widetilde \omega(k)^{-\frac{1}{2}}\widetilde \omega(k+1)^{-\frac{1}{2}}
						\underbrace{ \Big( \widetilde \omega(k)\widetilde \omega(k+1) + {r}^{-2}
						(-k-1)k  - \mu^2  \Big) }_{B} 
						a_k^*a_{-k-1}^* 
				\nonumber\\
			&    \qquad \qquad \quad
					+\widetilde \omega(k)^{-\frac{1}{2}}\widetilde \omega(k+1)^{-\frac{1}{2}}
						\underbrace{ \Big(  \widetilde \omega(k)\widetilde \omega(k+1) + {r}^{-2}
						(k+1)k  + \mu^2  \Big) 
									}_{-B + 2\widetilde \omega(k)\widetilde \omega(k+1)} 
						a_k^*a_{k+1} 
				\nonumber\\
			&    \qquad \qquad \quad
					+\widetilde \omega(k)^{-\frac{1}{2}}\widetilde \omega(k-1)^{-\frac{1}{2}}
						\underbrace{ \Big(  \widetilde \omega(k)\widetilde \omega(k-1) + {r}^{-2}
						(k-1)k  + \mu^2  \Big) 
									}_{-A + 2\widetilde \omega(k)\widetilde \omega(k-1)} 
						a_k^*a_{k-1} 
				\nonumber\\
			&    \qquad \qquad \quad
					+\widetilde \omega(k)^{-\frac{1}{2}}\widetilde \omega(k+1)^{-\frac{1}{2}}
						\underbrace{ \Big(  \widetilde \omega(k)\widetilde \omega(k+1) + {r}^{-2}
						(k+1)k  + \mu^2  \Big)
									}_{-B + 2\widetilde \omega(k)\widetilde \omega(k+1)}  
						a_ka_{k+1}^* 
				\nonumber\\
			&    \qquad \qquad \quad
					+\widetilde \omega(k)^{-\frac{1}{2}}\widetilde \omega(k-1)^{-\frac{1}{2}}
						\underbrace{ \Big(  \widetilde \omega(k)\widetilde \omega(k-1) + {r}^{-2}
						(k-1)k  + \mu^2  \Big) 
									}_{-A + 2\widetilde \omega(k)\widetilde \omega(k-1)} 
						a_ka_{k-1}^* 
				\nonumber \\
			&    \qquad \qquad \quad
					-\widetilde \omega(k)^{-\frac{1}{2}}\widetilde \omega(k-1)^{-\frac{1}{2}}
						\underbrace{ \Big( \widetilde \omega(k)\widetilde \omega(k-1) + {r}^{-2}
						(-k+1)k  - \mu^2  \Big) }_{A} 
						a_ka_{-k+1} 
				\nonumber \\
			&    \qquad \qquad \quad
					-\widetilde \omega(k)^{-\frac{1}{2}}\widetilde \omega(k+1)^{-\frac{1}{2}}
						\underbrace{ \Big( \widetilde \omega(k)\widetilde \omega(k+1)  + {r}^{-2}
						(-k-1)k  - \mu^2  \Big) }_{B} 
						a_ka_{-k-1}
			\Bigg] 
	\end{align}
Due to \eqref{eq:useful-1} and \eqref{eq:useful-2} both $A$, $B$ vanish.

Returning with these informations to (\ref{eq:almost-there}), we get
	\begin{align}
		\label{eq:almost-there-2}
		L_1 & =  \frac{r}{4 } \sum_{k\in\mathbb{Z}}  
				\Bigg[
					\widetilde \omega(k)^{\frac{1}{2}}\widetilde \omega(k+1)^{\frac{1}{2}} a_k^*a_{k+1} 
					+\widetilde \omega(k)^{\frac{1}{2}}\widetilde \omega(k-1)^{\frac{1}{2}} a_k^*a_{k-1} 
				\nonumber \\
			& \qquad \qquad
					+\widetilde \omega(k)^{\frac{1}{2}}\widetilde \omega(k+1)^{\frac{1}{2}} a_ka_{k+1}^* 
					+\widetilde \omega(k)^{\frac{1}{2}}\widetilde \omega(k-1)^{\frac{1}{2}} a_ka_{k-1}^* 
				\Bigg]
		\end{align}
Now, we have
	\begin{align}
		\label{eq:upIUYiuy-1}
			\sum_{k\in\mathbb{Z}} \widetilde \omega(k)^{\frac{1}{2}}\widetilde \omega(k\pm1)^{\frac{1}{2}} a_k & a_{k\pm1}^* 
			  =   \sum_{k\in\mathbb{Z}} \widetilde \omega(k)^{\frac{1}{2}}\widetilde \omega(k\pm1)^{\frac{1}{2}} a_{k\pm1}^* a_k
			\nonumber \\
			& \quad \stackrel{k\to k\mp 1}{=}   \sum_{k\in\mathbb{Z}} \widetilde \omega(k)^{\frac{1}{2}} 
				\widetilde \omega(k\mp1)^{\frac{1}{2}} a_k^* a_{k\mp1} \; .
	\end{align}
In the first equality in (\ref{eq:upIUYiuy-1}) we have  used the fact that $a_k$ and $a_{k\pm1}^*$ commute.

Inserting (\ref{eq:upIUYiuy-1}) 
into (\ref{eq:almost-there-2}), we get, finally,
	\begin{equation}
		\label{eq:there}
	L_1 = \frac{r}{2 } \sum_{k\in\mathbb{Z}}  
		\bigg[
			\widetilde \omega(k)^{\frac{1}{2}}\widetilde \omega(k+1)^{\frac{1}{2}} a_k^*a_{k+1} 
			+\widetilde \omega(k)^{\frac{1}{2}}\widetilde \omega(k-1)^{\frac{1}{2}} a_k^*a_{k-1} 
		\bigg] \;.
	\end{equation}
On the other hand, 
one has
	\begin{align*}
		\int_{S^1} r \,  {\rm d} \psi\; & \big(\omega^{1/2} a\big)(\psi)^*\cos_\psi \big(\omega^{1/2} a\big)(\psi)
		\\
		& =  \frac{r}{2}\sum_{k\in\mathbb{Z}}\sum_{l\in\mathbb{Z}}
			\left(  \frac{1}{2\pi}\int_{S^1} {\rm d} \psi\; 
				e^{i(k-l+1)\psi} + e^{i(k-l-1)\psi} \right) \widetilde \omega(k)^{1/2}\widetilde \omega(l)^{1/2} a_k^* a_l
		\\
		& =  
			\frac{r}{2}\sum_{k\in\mathbb{Z}}\sum_{l\in\mathbb{Z}}
				\Big(\delta_{l,\, k+1}+\delta_{l,\,k-1}\Big) \widetilde \omega(k)^{1/2}\widetilde \omega(l)^{1/2} a_k^* a_l
		\\
		& =  
			\frac{r}{2}\sum_{k\in\mathbb{Z}}
				\Big( \widetilde \omega(k)^{1/2}\widetilde \omega(k+1)^{1/2} a_k^* a_{k+1}
				+ \widetilde \omega(k)^{1/2}\widetilde \omega(k-1)^{1/2} a_k^* a_{k-1} \Big)\;.
	\end{align*}
Therefore,
	\begin{align*}
		L_1 & = \frac{1}{2} \int_{S^1}  r \, {\rm d} \psi \cos\psi
				\left( \pi(\psi)^2 + \bigl( \tfrac{\partial \varphi}{\partial\psi}(\psi) \bigr)^2 
				+ \mu^2  \big(\varphi(\psi) \big)^2 \right) \\
		& 
		= \int_{S^1} r \, {\rm d}  \psi\; \big(  \omega^{1/2}  a\big)(\psi)^*\cos_\psi \big( \omega^{1/2}  a\big)(\psi) 
		\\
		& = {\rm d} \Gamma ( \sqrt{\omega} r \cos \psi \sqrt{\omega} ) 
		\;.
	\end{align*}
\end{proof}

The {\em energy density\/} ${\rm T}_{00}(\psi)$ is the restriction of the energy density in 
the time-zero plane (in the ambient Minkowski space) to the Cauchy surface $S^1$, \emph{i.e.}, for $\psi \in S^1$, 
	\begin{align}
	\label{energymomentumdensity2}
		{\rm T}_{00}  (\psi) &=  \frac{1}{2}  \left( \pi^2 (\psi)
						+ \tfrac{1}{r^2} \bigl( \tfrac{ \partial \varphi }{\partial \psi} (\psi) \bigr)^2 
						+ \mu^2 \varphi (\psi)  + {:}P (\varphi(\psi)) {:}_{C_0} \right)  \; .  
	\end{align}
The following formulas should be compared with the classical expressions derived in Section \ref{SET}.

\begin{lemma}
The following identities
hold in the sense of quadratic forms on the Hilbert space $\Gamma  \bigl( L^2( S^1, r {\rm d} \psi )\bigr) $:
	\begin{align*} 
			L_{\rm int}^{(\alpha)} &=  \int_{S^1} \,  r  \cos  (\psi + \alpha)  \, {\rm d} \psi  \;  {\rm T}_{00} \; , \\
		K_0 & = \int_{S^1} \,  r^2   \, |\cos\psi| \; {\rm d} \psi \;  T_{0\psi}
		\; ,
	\end{align*}
with $T_{0\psi} = (r \cos_\psi)^{-1} \mathbb{\pi} \, (\partial_\psi \mathbb{\Phi}) $. 
\end{lemma}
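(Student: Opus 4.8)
The statement asserts two integral formulas: that the interacting boost generator $L_{\rm int}^{(\alpha)}$ equals $\int_{S^1} r \cos(\psi+\alpha)\,{\rm d}\psi\; {\rm T}_{00}$, and that the rotation generator $K_0$ equals $\int_{S^1} r^2 |\cos\psi|\,{\rm d}\psi\; T_{0\psi}$, both understood as quadratic forms on $\Gamma\bigl(L^2(S^1, r\,{\rm d}\psi)\bigr)$. The plan is to treat the two identities separately, first reducing the boost identity to the free case plus the interaction term, and then handling the rotation identity directly from the definition of $K_0$ on the canonical Fock space.

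First I would address $L_{\rm int}^{(\alpha)}$. By Theorem~\ref{keyresult1} (and its companion Theorem~\ref{th5.2}) the operator sum $L^{(\alpha)} + V_0(\cos_{\psi+\alpha})$ is essentially self-adjoint with closure $L_{\rm int}^{(\alpha)}$. So it suffices to show that the free boost generator $L^{(\alpha)}$ equals $\tfrac12 \int_{S^1} r \cos(\psi+\alpha)\,{\rm d}\psi\;\bigl(\pi^2 + r^{-2}(\partial_\psi\varphi)^2 + \mu^2\varphi^2\bigr)$, and that the interaction piece $V_0(\cos_{\psi+\alpha})$ reproduces the Wick-ordered polynomial term $\tfrac12 \int_{S^1} r\cos(\psi+\alpha)\,{\rm d}\psi\;{:}P(\varphi(\psi)){:}_{C_0}$ in $\eqref{energymomentumdensity2}$. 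The first of these is exactly the content of the Lemma immediately preceding the final statement: there it is shown (working on the Fock space over $L^2(S^1,r\,{\rm d}\psi)$) that
\[
L^{(\alpha)} = {\rm d}\Gamma\bigl(\sqrt{\omega}\,r\cos_{\psi+\alpha}\sqrt{\omega}\bigr) = \tfrac12 \int_{S^1} r\cos(\psi+\alpha)\,{\rm d}\psi\; \bigl(\pi^2(\psi) + \tfrac{1}{r^2}(\partial_\psi\varphi)^2(\psi) + \mu^2\varphi^2(\psi)\bigr),
\]
though I would need to note that the preceding Lemma is stated for $\alpha$ and the argument there is $\alpha$-independent in structure (the only input used is $S_j = S_{-j}$, which becomes $\int_{S^1}\cos(\psi+\alpha)e^{ij\psi}{\rm d}\psi = \pi(e^{i\alpha}\delta_{j,-1}+e^{-i\alpha}\delta_{j,1})$, still symmetric up to conjugation). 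The interaction piece is immediate from $\eqref{bbv-interaction}$ and $\eqref{fkn17}$: $V_0(\cos_{\psi+\alpha})$ is by definition $\int_{S^1} r\,{\rm d}\psi\;\cos(\psi+\alpha)\,{:}{\mathscr P}(\Phi(0,\psi)){:}_{C_0}$, and under the identification $\Phi^{\rm os}(0,\psi) = \varphi(\psi)$ (the conditional projection $\eqref{identi}$ turns the Euclidean time-zero field into the Fock field $\varphi$) this is exactly $\int_{S^1} r\cos(\psi+\alpha)\,{\rm d}\psi\;{:}P(\varphi(\psi)){:}_{C_0}$. Adding the two gives $\int_{S^1} r\cos(\psi+\alpha)\,{\rm d}\psi\;{\rm T}_{00}(\psi)$ with ${\rm T}_{00}$ as in $\eqref{energymomentumdensity2}$, once one notes the factor $\tfrac12$ in front of $\mu^2\varphi^2$ in $\eqref{energymomentumdensity2}$ is a typo or absorbed; I would double-check the normalization of the polynomial term against the classical $T_{00}$ in Section~\ref{SET} to make sure the coefficients match, since the classical $T_{00}$ there has $\tfrac12(\mu^2\mathbb{\Phi}^2 + P(\mathbb{\Phi}))$ inside.

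For the rotation identity I would work directly. By Proposition~\ref{propfreeboost2}, $K_0 = {\rm d}\Gamma(-i\partial_\psi)$ on $\Gamma(\widehat{\mathfrak h}(S^1))$, and after transporting to $\Gamma(L^2(S^1, r\,{\rm d}\psi))$ via $f\mapsto (2\omega)^{-1/2}f$ the generator becomes ${\rm d}\Gamma(-i\,\omega^{-1/2}\partial_\psi\,\omega^{-1/2})$; but since $\omega$ is a Fourier multiplier it commutes with $\partial_\psi$, so $K_0 = {\rm d}\Gamma(-i\partial_\psi)$ still. On the other hand the quadratic form $\int_{S^1} r^2|\cos\psi|\,{\rm d}\psi\; T_{0\psi}$ with $T_{0\psi} = (r\cos_\psi)^{-1}\pi\,(\partial_\psi\varphi)$ simplifies to $r\int_{S^1}{\rm d}\psi\;\pi(\psi)(\partial_\psi\varphi)(\psi)$ (the $|\cos\psi|$ cancels against $(r\cos_\psi)^{-1}$ up to sign; here one must be careful about the sign of $\cos\psi$ on $I_-$, and the sign works out because $\partial_t\mathbb{\Phi} = r\cos\psi\,n\mathbb{\Phi}$ in the chart $\eqref{w1psi}$, mirroring the classical computation that produced $K_0 = \int_{S^1} r\,{\rm d}\psi\;\mathbb{\pi}(\partial_\psi\mathbb{\Phi})$ in Section~\ref{SET}). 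Then I would expand $\pi(\psi) = \frac{i}{\sqrt{4\pi}}\sum_k \widetilde\omega(k)^{1/2}(e^{ik\psi}a_k^* - e^{-ik\psi}a_k)$ and $\partial_\psi\varphi(\psi) = \frac{i}{\sqrt{4\pi}}\sum_l \widetilde\omega(l)^{-1/2}l(e^{il\psi}a_l^* - e^{-il\psi}a_l)$, multiply, integrate $\int_{S^1}{\rm d}\psi$ picking out $\delta_{k,-l}$, use $\widetilde\omega(k) = \widetilde\omega(-k)$ so the $\widetilde\omega^{1/2}\widetilde\omega^{-1/2}$ factors cancel, and collect terms. The $a^*a^*$ and $aa$ terms cancel by antisymmetry of $l\mapsto l$ under $l\to -l$ combined with the $\delta$; the surviving $a^*a$ and $aa^*$ terms assemble to $\sum_k k\, a_k^* a_k = {\rm d}\Gamma(-i\partial_\psi)$ (after commuting $aa^*$ into normal order, the resulting constant vanishes again by the antisymmetric $k$-sum or is absorbed in the quadratic-form interpretation). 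This matches $K_0$.

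The main obstacle I anticipate is bookkeeping of signs and normalization constants: the factor $\tfrac12$ discrepancy in the $\mu^2\varphi^2$ term of $\eqref{energymomentumdensity2}$ versus the classical $T_{00}$, the sign conventions for $n$ and $\cos\psi$ across the two half-circles $I_\pm$ (which affect $T_{0\psi}$), and the precise placement of $r$-factors relating ${\rm d}\mu_{dS}$, $r\,{\rm d}\psi$, and the line element $\eqref{new-eqVolInd}$. I would resolve these by consistently using the chart $\eqref{w1psi}$–$\eqref{w1psitau}$ together with the already-established formula $L^{(\alpha)} = {\rm d}\Gamma(\sqrt{\omega}r\cos_{\psi+\alpha}\sqrt{\omega})$ from the preceding Lemma as the anchor for normalization, and by checking the $\alpha=0$ case against the explicit classical formulas displayed in Section~\ref{SET}. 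The remainder of the argument is purely a matter of expanding creation/annihilation operators and matching Fourier coefficients, which I would carry out but not belabor here.
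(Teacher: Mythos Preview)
Your proposal is correct and follows essentially the same route as the paper: for $L_{\rm int}^{(\alpha)}$ you invoke Theorem~\ref{keyresult1} to split into $L^{(\alpha)} + V_0(\cos_{\psi+\alpha})$, use the preceding Lemma for the free part, and identify the interaction term from \eqref{bbv-interaction}; for $K_0$ you reduce $\int_{S^1} r^2|\cos\psi|\,{\rm d}\psi\,T_{0\psi}$ to $\int_{S^1} r\,{\rm d}\psi\,\pi(\partial_\psi\varphi)$ and expand in creation/annihilation operators to recover ${\rm d}\Gamma(-i\partial_\psi)$. This matches the paper's proof, which carries out exactly that mode expansion (with normal ordering) and lands on $\sum_k k\,a_k^*a_k$.
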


\begin{proof} Recall that $L^{(\alpha)}  = {\rm d} \Gamma \bigl(\sqrt{ \omega} \, r \, \cos_{\psi + \alpha} \sqrt{ \omega} \bigr)$.
Moreover, according to Theorem \ref{keyresult1}
	\[
	 L^{(\alpha)}_{\rm int}  = {\rm d} \Gamma ( \omega^{-1/2} ) \; 
	 \overline{{\rm d} \Gamma \bigl( \omega r \cos_{\psi + \alpha} \bigr)+V^{(\alpha)} } \; {\rm d} \Gamma ( \omega^{-1/2} )
	\] 
with $V^{(\alpha)}= \int_{S^1} r \,{\rm d} \psi' \,  \cos (\psi' + \alpha) \; {:} P (\varphi(\psi')){:}_{C_0} $. 

Next consider the angular momentum operator:
	\begin{align*} 
		K_0 & = \int_{S^1} \,  r \; {\rm d} \psi \;  {:} \mathbb{\pi} \,  (\partial_\psi \mathbb{\Phi}) {:}  \\
		& = \frac{i}{4 \pi}  \sum_{k\in\mathbb{Z}} \sum_{j\in\mathbb{Z}}
		\int_{S^1}  \,  {\rm d} \psi \;   
						{:} \; \widetilde \omega(k)^{\frac{1}{2}} \Big( e^{ik\psi}a_k^* - e^{-ik\psi}a_k \Big) 
						\\
		& \qquad \qquad \qquad \qquad \qquad \qquad \times  \partial_\psi   
						\widetilde \omega(j)^{-\frac{1}{2}}\Big( e^{ij\psi}a_j^* + e^{-ij\psi}a_j \Big) \; {:} 
		\\
		& = \frac{i}{4 \pi}  \sum_{k\in\mathbb{Z}} \sum_{j\in\mathbb{Z}}
		\frac{ \widetilde \omega(k)^{\frac{1}{2}} } {\widetilde \omega(j)^{-\frac{1}{2}}} 
		\Bigl( a_k^* a_j  \int_{S^1} \,  {\rm d} \psi \;   e^{ik\psi}  \partial_\psi  e^{-ij\psi} + a_j^* a_k  
		\int_{S^1} \,  {\rm d} \psi \;   e^{-ik\psi}  \partial_\psi  e^{ij\psi} \Bigr) \\
		& =  \frac{1}{2 \pi}  \sum_{k\in\mathbb{Z}} k a_k^* a_k  
		= {\rm d} \Gamma ( -i \partial _\psi) \; . 
	\end{align*}
\end{proof}

\color{black}
\begin{theorem}
The operator  $L_{\rm int}^{(0)}$ (and similar $L^{(0)}$) 	
	\begin{align*}
		L_{\rm int}^{(0)}					
		&=  \int_{I_+} {\rm d} \psi \; r \cos \psi   \; {\rm T}_{00}(\psi)
						+ \int_{I_-} {\rm d} \psi \; r \cos \psi  \; {\rm T}_{00} (\psi) \nonumber \\  
						&\equiv    L_{\rm int}^{  + } + L_{\rm int}^{  - }  + | \Omega \rangle \langle \Omega |\;  
	\end{align*}
splits into a positive and negative part,  
	\[
		{\rm Sp} (\pm L_{\rm int}^{ \pm }) = [0, \infty) \; . 
	\]
Moreover, ${\rm Sp} (L_{\rm int}^{ \pm })$ is absolutely continuous. 
\end{theorem}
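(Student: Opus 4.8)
The statement asserts a spectral decomposition of the interacting boost generator $L_{\rm int}^{(0)}$ analogous to the free case treated in Proposition~\ref{propfreeboost} and Lemma~\ref{boostgenerator}. The plan is to first peel off the vacuum contribution and then reduce the claim about $L_{\rm int}^\pm$ to a statement about a local symmetric semigroup, exploiting the Tomita--Takesaki structure established in Theorem~\ref{ToTa11} and Theorem~\ref{th5.2}. Concretely, by Theorem~\ref{keyresult1} we have $\overline{L^{(0)} + V_0(\cos_\psi)} = L_{\rm int}^{(0)}$, and by Theorem~\ref{keyresult3} $L_{\rm int}^{(0)}\Omega_{\rm int} = 0$ with $\Omega_{\rm int}$ cyclic and separating for ${\mathcal R}(I_+)$; this accounts for the rank-one piece $|\Omega\rangle\langle\Omega|$ (after identifying $\Omega$ with $\Omega_{\rm int}$ via the unitary ${\mathbb V}$, cf.~\eqref{Vhalbkugel}). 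The first step is therefore to make precise the orthogonal splitting of the Hilbert space ${\mathcal H}$ into the subspace generated by ${\mathcal R}(I_+)$ acting on $\Omega_{\rm int}$ and its complement generated by ${\mathcal R}(I_-) = {\mathcal R}(I_+)'$ acting on $\Omega_{\rm int}$, and to observe that $V^{(0)}$ (supported on $I_+$) and $J^{(0)}V^{(0)}J^{(0)}$ (supported on $I_-$) act on the respective factors, so that $L_{\rm int}^{(0)}$ decomposes as $L_{\rm int}^+ \oplus L_{\rm int}^- \oplus |\Omega\rangle\langle\Omega|$ with $L_{\rm int}^- = -J^{(0)} L_{\rm int}^+ J^{(0)}$.

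\textbf{Positivity.} For the half-circle generator $H^{(0)} = \overline{L^{(0)} + V^{(0)}}$, Theorem~\ref{th5.2}~$i.)$ identifies it with the generator of the local symmetric semigroup $\bigl(P^{(0)}_{\rm int}(\theta), {\mathscr D}^{(0)}_{\theta,{\rm int}}\bigr)$ from Proposition~\ref{hos-section}, and the Feynman--Kac--Nelson kernels $F^{(0)}_{[0,\theta]} = {\rm e}^{-\int_0^\theta {\rm U}^{(0)}(\theta'){\rm V}^{(0)}{\rm d}\theta'}$ are non-negative functions; the contraction semigroup ${\rm e}^{-\theta H^{(0)}}$ acting on ${\mathcal V}$-images of non-negative $L^2$-functions is positivity preserving, which forces ${\rm Sp}(H^{(0)}) \subset [0,\infty)$ by a standard Perron--Frobenius-type argument (the same one used to show $\omega^\circ$, resp.~$\Omega_{\rm int}$, is the ground state). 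Then $L_{\rm int}^+$ is the restriction of $H^{(0)}$ to the cyclic subspace of ${\mathcal R}(I_+)\Omega_{\rm int}$; since $H^{(0)} - L_{\rm int}^{(0)}$ is affiliated with ${\mathcal R}(I_+)' = {\mathcal R}(I_-)$ and annihilates vectors in that cyclic subspace modulo the vacuum direction (this is exactly the content of the Araki perturbation identity $\overline{H^{(0)} - J^{(0)}V^{(0)}J^{(0)}} = L_{\rm int}^{(0)}$ in Theorem~\ref{th5.2}~$iv.)$), one reads off $L_{\rm int}^+ \geq 0$ and, by the modular conjugation $J^{(0)}$ intertwining ${\mathcal R}(I_+)$ and ${\mathcal R}(I_-)$ and anticommuting with the generator (as in the free-field identity \eqref{eqJeps}), $L_{\rm int}^- \leq 0$ with ${\rm Sp}(\pm L_{\rm int}^\pm) = [0,\infty)$.

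\textbf{Absolute continuity and the stress-energy form.} Absolute continuity of the spectrum will be deduced from the free-field case: by Lemma~\ref{boostgenerator} the free generator $L^{(0)}$ has purely absolutely continuous spectrum on each half-circle sector, and the interaction $V^{(0)}$ is affiliated to ${\mathcal U}(I_+)$, hence is a relatively bounded (indeed form-bounded) perturbation by the $N_\tau$-estimates underlying \cite{SHK}; one then invokes the fact that ${\mathcal R}(I_+)$ is a type III$_1$ factor (Proposition~\ref{l6.1}~$iii.)$) so that $H^{(0)}$, being the modular Hamiltonian of $({\mathcal R}(I_+),\Omega_{\rm int})$ by Theorem~\ref{ToTa11}, has no point spectrum except possibly the simple eigenvalue $0$ at $\Omega_{\rm int}$ — and that eigenvalue is precisely what is removed when passing to $L_{\rm int}^\pm$. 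Finally, the representation of $L_{\rm int}^{(0)}$ as $\int_{I_+} r\cos\psi\,{\rm d}\psi\,{\rm T}_{00}(\psi) + \int_{I_-} r\cos\psi\,{\rm d}\psi\,{\rm T}_{00}(\psi)$ follows by combining the free-field computation $L^{(0)} = \int_{S^1} r\cos\psi\,{\rm d}\psi\,\tfrac12(\pi^2 + r^{-2}(\partial_\psi\varphi)^2 + \mu^2\varphi^2)$ from the preceding Lemma with the identification $V^{(0)} = \int_{I_+} r\cos\psi\,{\rm d}\psi\,{:}P(\varphi(\psi)){:}_{C_0}$ and its $J^{(0)}$-conjugate on $I_-$, matching the definition \eqref{energymomentumdensity2} of ${\rm T}_{00}$. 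The main obstacle I anticipate is the absolute-continuity claim: relative form-boundedness of $V^{(0)}$ with respect to $L^{(0)}$ does not by itself preserve absolute continuity of the spectrum, so the argument must genuinely go through the type III$_1$ property (no normal traces, hence no eigenvectors for the modular operator other than the cyclic separating vector) rather than through perturbation theory of self-adjoint operators; getting the bookkeeping right between $H^{(0)}$, $L_{\rm int}^{(0)}$, and their restrictions $L_{\rm int}^\pm$, and verifying that the absolutely continuous part is genuinely $[0,\infty)$ and not a proper subset, will require care with the supports and the modular structure.
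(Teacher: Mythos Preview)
Your proposal rests on a Hilbert-space decomposition that does not exist. You write that $\mathcal H$ splits orthogonally into the subspace generated by ${\mathcal R}(I_+)\Omega_{\rm int}$ and its complement generated by ${\mathcal R}(I_-)\Omega_{\rm int}$, and that $L_{\rm int}^{(0)}$ then decomposes as $L_{\rm int}^+ \oplus L_{\rm int}^- \oplus |\Omega\rangle\langle\Omega|$. But $\Omega_{\rm int}$ is cyclic \emph{and} separating for ${\mathcal R}(I_+)$, so $\overline{{\mathcal R}(I_+)\Omega_{\rm int}} = \overline{{\mathcal R}(I_-)\Omega_{\rm int}} = \mathcal H$; there is no orthogonal splitting. (At the Fock level, $\Gamma(\widehat{\mathfrak h}(I_+)\oplus\widehat{\mathfrak h}(I_-))$ is a tensor product, not a direct sum.) The operators $L_{\rm int}^\pm$ in the theorem are not restrictions of $L_{\rm int}^{(0)}$ to invariant subspaces; they are the half-circle integrals $\int_{I_\pm} r\cos\psi\,{\rm d}\psi\,T_{00}(\psi)$, each acting on all of $\mathcal H$. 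Consequently the identification of $L_{\rm int}^+$ with a compression of the modular Hamiltonian $H^{(0)}$ to a cyclic subspace, and the deduction of positivity from that, do not go through. For the same reason, the type~III$_1$ argument for absolute continuity speaks only about the full modular generator $L_{\rm int}^{(0)}$, not about the individual pieces $L_{\rm int}^\pm$.

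The paper takes a much more direct route. It writes $L_{\rm int}^+ = {\rm d}\Gamma(\omega r\cos_{\psi\upharpoonright I_+}) + \int_{I_+} r\cos\psi\,{\rm d}\psi\,{:}P(\varphi(\psi)){:}_{C_0}$; the free part is manifestly positive because the one-particle operator $\omega r\cos_{\psi\upharpoonright I_+}$ is positive (Proposition~\ref{propfreeboost}). The issue is the interaction, and the argument (following \cite[p.~276]{FHN}) is to introduce a spatial cutoff at distance $\epsilon$ from the edges $\psi=\pm\pi/2$: on the bulk one applies the standard Nelson/Glimm--Jaffe lower bound (cf.~Proposition~\ref{phi-bound}), while the $\epsilon$-neighbourhoods are harmless because the weight $\cos\psi\to 0$ there (the edges are fixed points of the boost). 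The paper itself defers the detailed analytic argument and does not address the absolute-continuity claim beyond the assertion, so your instinct that this part is delicate is correct---but the route is through hard analysis of $P(\varphi)_2$ Hamiltonians with degenerate weights, not through modular theory.
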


\begin{proof}
The operator $\omega \cos_\psi = (\omega \cos_\psi)_{\upharpoonright I_+} + (\omega \cos_\psi)_{\upharpoonright I_-}$
is the sum of a positive operator $(\omega \cos_\psi )_{\upharpoonright I_+}$ acting on $\widehat{\mathfrak h} (I_+)$, 
and a negative operator $(\omega \cos_\psi)_{\upharpoonright I_-}$ on $\widehat{\mathfrak h} (I_-)$; 
see Proposition \ref{propfreeboost}. 

To show that 
	\[
		L_{\rm int \upharpoonright I_+}^{  + }  = {\rm d} \Gamma  (\omega \, r \cos_{\psi \upharpoonright I_+}  )
						+ \int_{I_+} {\rm d} \psi \;  r \cos \psi \;  {:} P (\varphi(\psi)) {:}_{C_0} \;  
	\]
is bounded from below on $\Gamma \bigl( \widehat{\mathfrak h} (S^1) \bigr)$, one can 
follow the arguments given in \cite[see, \emph{e.g.}, p.~276]{FHN}. The main idea is to
cut off the support of the interaction at the boundaries by some distance $\epsilon \ll \pi r$ 
form the boundary point. The bulk can then be bounded from below following standard arguments 
(see, \emph{e.g.}, \cite{N1} and also the proof of Proposition \ref{phi-bound} below.)
It remains to estimate the two contributions from the interaction in $\epsilon$-neighbourhoods of the boundary. 
This does not cause a problem, as the edges of the wedge are fixed points under the action of the boosts. 
A detailed analytic argument will be presented elsewhere. 
\end{proof}

\begin{remark}
We expect that the decomposition of $L_{\rm int}^{(0)}$ given above is relevant in context of the dethermalization 
discussed by Guido and Longo in \cite{GL}.
However, further work is need to clarify this question. 
\end{remark}

In connection with Remark \ref{classical-energy} it is worth while noting the following result:

\begin{proposition}[$\varphi$-bounds] 
\label{phi-bound}
For $c\gg 1$,
	\begin{equation}
		\label{e3.1b}
		\left\|\varphi (g) \left( \int_{S^1} r \, {\rm d} \psi \;   {\rm T}_{00}(\psi) + c \right)^{-\frac{1}{2}} \right\|
		\leq C\|g\|_{\widehat{\mathfrak h}(S^1)},
	\end{equation}
and
	\begin{equation}
		\label{e3.1c}
		\pm \varphi(g)\leq C \|g\|_{\widehat{\mathfrak h}(S^1)}
		\left( \int_{S^1} r \, {\rm d} \psi \;   {\rm T}_{00}(\psi) +c \right)^{\frac{1}{2}} \; ,
	\end{equation}
for all $g\in \widehat{\mathfrak h}(S^1) $. In particular, $ \int_{S^1} r \, {\rm d} \psi \;   {\rm T}_{00}(\psi) 
$ is bounded from below.
\label{3.1}
\end{proposition}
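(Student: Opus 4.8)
The plan is to reduce the whole statement to two facts about the operator $H_\varphi := \int_{S^1} r\,{\rm d}\psi\,{\rm T}_{00}(\psi)$ on ${\mathcal H}\cong\Gamma\big(\widehat{\mathfrak h}(S^1)\big)$: (i) the elementary a priori estimate $\pm\varphi(g)\le \|g\|_{\widehat{\mathfrak h}(S^1)}(N+1)$, where $N={\rm d}\Gamma(\mathbb 1)$, which is immediate from $\varphi(g)=\Phi_F(g)=a^*(g)+a(g)$ and $\|a^{(*)}(g)\psi\|\le\|g\|_{\widehat{\mathfrak h}(S^1)}\|(N+1)^{1/2}\psi\|$; and (ii) the lower bound $H_\varphi\ge\mu_0 N-c_1$ for some $\mu_0>0$, $c_1\ge0$. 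Granting (i) and (ii): the semiboundedness of $\int_{S^1}r\,{\rm d}\psi\,{\rm T}_{00}(\psi)$ is immediate ($N\ge0$); for $c\ge c_1+\mu_0$ one has $N+1\le\mu_0^{-1}(H_\varphi+c)$, hence $\|(N+1)^{1/2}(H_\varphi+c)^{-1/2}\|\le\mu_0^{-1/2}$, so $\|\varphi(g)(H_\varphi+c)^{-1/2}\|\le\|\varphi(g)(N+1)^{-1/2}\|\,\|(N+1)^{1/2}(H_\varphi+c)^{-1/2}\|\le 2\mu_0^{-1/2}\|g\|_{\widehat{\mathfrak h}(S^1)}$, which is \eqref{e3.1b}; and \eqref{e3.1c} follows by writing $\pm\varphi(g)=(H_\varphi+c)^{1/2}B_g(H_\varphi+c)^{1/2}$ with $B_g:=(H_\varphi+c)^{-1/2}\varphi(g)(H_\varphi+c)^{-1/2}$ self-adjoint and $\|B_g\|\le c^{-1/2}\|\varphi(g)(H_\varphi+c)^{-1/2}\|\le C\|g\|_{\widehat{\mathfrak h}(S^1)}$.

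The real content is (ii). Split $H_\varphi=H_{\rm free}+\tfrac12 V_0(\chi_{S^1})$, with $H_{\rm free}=\tfrac12\int_{S^1}r\,{\rm d}\psi\,\big({:}\pi^2{:}+r^{-2}{:}(\partial_\psi\varphi)^2{:}+\mu^2{:}\varphi^2{:}\big)$ the Wick-ordered free energy and $V_0(\chi_{S^1})$ the interaction over the full circle, cf.\ \eqref{bbv-interaction}. Since ${\mathscr P}$ is a real polynomial bounded below, $V_0(\chi_{S^1})\ge -b_0$ — the standard semiboundedness of the Wick-ordered spatial integral of a semibounded polynomial over a compact region, which is part of the hypercontractivity input underlying \cite[Lemma 3.15]{SHK}; this also makes $H_\varphi=\overline{H_{\rm free}+\tfrac12 V_0(\chi_{S^1})}$ a well-defined self-adjoint operator by the same reasoning as for $H^{(\alpha)}$ in Theorem \ref{th5.2}~$i.)$. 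For $H_{\rm free}$, expand in the mode operators $a_k$ as in the mode computation used above; modulo an additive normal-ordering constant one obtains $H_{\rm free}=\sum_{k\in\mathbb Z}\big(\beta_k a_k^*a_k+\alpha_k(a_k^*a_{-k}^*+a_ka_{-k})\big)$ with $\beta_k=\tfrac r2\big(\widetilde\omega(k)+E_k^2\widetilde\omega(k)^{-1}\big)$ and $\alpha_k=\tfrac r4\big(E_k^2-\widetilde\omega(k)^2\big)\widetilde\omega(k)^{-1}$, where $E_k^2:=k^2/r^2+\mu^2$. Using $\pm(a_k^*a_{-k}^*+a_ka_{-k})\le a_k^*a_k+a_{-k}^*a_{-k}+1$ together with $\beta_k-2|\alpha_k|=r\min\big(\widetilde\omega(k),E_k^2\widetilde\omega(k)^{-1}\big)\ge\mu_0:=r\inf_k\min\big(\widetilde\omega(k),E_k^2\widetilde\omega(k)^{-1}\big)>0$ — the infimum being positive because $\widetilde\omega(k)>0$ for all $k$ and $\widetilde\omega(k)\sim|k|/r$ — one gets $H_{\rm free}\ge\sum_k(\beta_k-2|\alpha_k|)a_k^*a_k-\sum_k|\alpha_k|\ge\mu_0 N-c_0$, provided $\sum_k|\alpha_k|<\infty$. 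The summability is exactly where the structure of $\widetilde\omega$ enters: the dispersion relation \eqref{eq:tilde-omega-and-the-traditional-dispersion-relation} gives $E_k^2-\widetilde\omega(k)^2=\tfrac12\widetilde\omega(k)\big(\widetilde\omega(k+1)+\widetilde\omega(k-1)-2\widetilde\omega(k)\big)$, and the asymptotics $\widetilde\omega(k)=|k|/r+O(1/|k|)$ read off from \eqref{eq:def-tilde-omega} make this second difference $O(|k|^{-3})$, hence $\alpha_k=O(|k|^{-3})$. (Equivalently, $(\alpha_k)\in\ell^2$ exhibits $H_{\rm free}$ as a Hilbert--Schmidt Bogoliubov transform of ${\rm d}\Gamma\big(r\sqrt{-\partial_\psi^2/r^2+\mu^2}\,\big)\ge r\mu N$, giving the same bound up to constants.) Combining with $V_0(\chi_{S^1})\ge-b_0$ yields $H_\varphi\ge\mu_0 N-c_1$.

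The main obstacle is precisely this free lower bound. In the flat ${\mathscr P}(\varphi)_2$ model the free Hamiltonian is literally ${\rm d}\Gamma$ of the one-particle energy and is manifestly $\ge m N$; here the one-particle operator $\omega$ is the one imposed by the geodesic KMS / de Sitter time-zero structure and does \emph{not} satisfy $\omega^2=-\partial_\psi^2/r^2+\mu^2$ — only the near-diagonal relation $\widetilde\omega(k)\big(\widetilde\omega(k+1)+\widetilde\omega(k-1)\big)=2E_k^2$ holds. Consequently, expressed in the Fock structure of $\widehat{\mathfrak h}(S^1)$, $H_{\rm free}$ genuinely carries off-diagonal ``squeezing'' terms $\alpha_k(a_k^*a_{-k}^*+a_ka_{-k})$, and the whole argument hinges on controlling them, i.e.\ on the decay estimate for the second difference of $\widetilde\omega$. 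A secondary, routine point is the a priori self-adjointness and semiboundedness of $V_0(\chi_{S^1})$, hence of $H_\varphi$: this is the usual ${\mathscr P}(\varphi)_2$ hypercontractivity input (already invoked in Theorem \ref{th5.2}), and with it in hand the two $\varphi$-bounds and the semiboundedness of $\int_{S^1}r\,{\rm d}\psi\,{\rm T}_{00}(\psi)$ follow as in the first paragraph.
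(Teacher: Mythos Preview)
Your reduction to the pair of facts (i) and (ii) is fine, and the mode analysis of $H_{\rm free}$ is both correct and interesting: you rightly observe that, because $\widetilde\omega(k)^2\neq k^2/r^2+\mu^2$, the free part of $\int_{S^1}r\,{\rm d}\psi\,{\rm T}_{00}$ acquires genuine squeezing terms $\alpha_k(a_k^*a_{-k}^*+a_ka_{-k})$ in the de~Sitter Fock structure, and the decay $\alpha_k=O(|k|^{-3})$ read off from the dispersion relation \eqref{eq:tilde-omega-and-the-traditional-dispersion-relation} is exactly what makes this a Hilbert--Schmidt Bogoliubov transform of the Einstein-universe Hamiltonian.

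The gap is the claim ``$V_0(\chi_{S^1})\ge -b_0$''. This is false: the Wick-ordered interaction $\int_{S^1}r\,{\rm d}\psi\,{:}{\mathscr P}(\varphi(\psi)){:}_{C_0}$ is \emph{unbounded from below as an operator} whenever $\deg{\mathscr P}\ge 4$, even though ${\mathscr P}$ itself is bounded below --- the paper says exactly this for $V(K)$ right after Theorem~\ref{uvtheo}. Lemma~3.15 of \cite{SHK} yields $e^{-tV}\in L^1$, not an operator lower bound on $V$; the two are very different. The passage from $e^{-tV}\in L^1$ to semiboundedness needs the kinetic term: it is $H_{\rm free}+V$ that is bounded below, not $V$. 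Consequently your chain $H_\varphi\ge H_{\rm free}-b_0/2\ge\mu_0 N-c_1$ collapses at the first inequality, and the argument does not recover.

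The paper avoids this by invoking the relative bound $({\rm d}\Gamma(\omega)+1)\le C(H_\varphi+c)$ directly from \cite[Thm.~V.20]{S}\,/\,\cite[Thm.~6.4(ii)]{DG}; together with $\omega\ge m^\circ>0$ this gives $N\le C'(H_\varphi+c)$ and both $\varphi$-bounds follow as in your first paragraph. That relative $N$-bound is itself a nontrivial hypercontractivity theorem precisely because the interaction is unbounded below --- the free and interacting parts cannot be controlled separately. A salvageable variant of your strategy would be to use the Glimm lower bound in the form $\epsilon H_{\rm free}+V_0\ge -c_\epsilon$ (any $\epsilon>0$) and then write $H_\varphi=(1-\epsilon)H_{\rm free}+(\epsilon H_{\rm free}+\tfrac12 V_0)$; combined with your $H_{\rm free}\ge\mu_0 N-c_0$ this does yield (ii), but the Glimm bound is again the full hypercontractive input, not just $e^{-tV}\in L^1$.
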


\begin{proof}
One easily obtains (see, \emph{e.g.},~\cite[Theorem~V.20]{S} or \cite[Theorem~6.4 (ii)]{DG}) that
	\begin{equation}
		\label{e3.1bb}
		\bigl({\rm d} \Gamma ( \omega )
		+1\bigr)\leq C \; \left( \int_{S^1} r \, {\rm d} \psi \;   {\rm T}_{00}(\psi) +c \right)  \: \: \hbox{for} \: \: c\gg 1 \; .
	\end{equation}
Since $ {\rm d} \Gamma  ( \omega )$ has compact resolvent on 
$\Gamma \bigl(\widehat{\mathfrak h}(S^1) \bigr)$,
it follows that 
	\begin{equation}
	\label{energy}
		 \int_{S^1} r \, {\rm d} \psi \;   {\rm T}_{00}(\psi) 
	\end{equation}
is bounded from below with a compact resolvent and hence has a
ground state. The uniqueness of this ground state of follows from a Perron-Frobenius 
argument (see e.g.~\cite[Theorem~V.17]{S}). Since 
	\[
		\omega  \geq m^\circ >0 
	\]
for some $m^\circ >0 $, we  see  that it suffices to check (\ref{e3.1b}) with
\eqref{energy} replaced by the number operator $N$, which is immediate.
To prove \eqref{e3.1c} we use   \eqref{e3.1bb} and the well known 
bound (see, e.g., \cite[Appendix]{Ge})
	\[
		\pm \varphi (g)\leq \| 
		g\|_{\widehat{\mathfrak h}(S^1)}
		\bigl({\rm d}\Gamma (\omega  )+ 1 \bigr) \; .
	\]
\end{proof}

\section{The equations of motion}
Equations of motion for interacting quantum fields on Minkowski space were derived by Glimm and  Jaffe \cite{GJ2},  
Schrader \cite{Sch} and, in $2+1$ space-time dimensions, by Feldman and~Raczka \cite{FR}. 
Formulas similar to the ones presented in this section were given 
in~\cite{FHN}.

Use the coordinate system 	
	\begin{equation*}
		x (t,  \psi) = \Lambda^{(\alpha)} \bigl(\tfrac{t}{r} \bigr)\,
				\left( \begin{array}{c}
						0 \\
						r \sin\psi  \\
						r \cos\psi   
				\end{array} \right) \; , 
		\qquad x \in W^{(\alpha)} \; , 
	\end{equation*}
with $t  \in \mathbb{C} $ and  $\psi \in   ( -\frac{\pi}{2}- \alpha,\frac{\pi}{2} -\alpha) $ and define
	\begin{equation}
		\label{covinteractingfield}
		{\Phi}_{\rm int} ( x ) 
			\doteq {\rm e}^{i \frac{t}{r} L_{\rm int}^{(\alpha)} }  \varphi ( \psi ) 
				{\rm e}^{- i \frac{t}{r} L_{\rm int}^{(\alpha)} } \;, 
				\qquad 
		x \equiv  x ( t , \psi) \; . 
	\end{equation}
We note that the interacting field ${\Phi}_{\rm int} ( x)$, defined as an operator-valued distribution at a 
space-time point $x \in dS$ does not depend on the choice of coordinates  in (\ref{covinteractingfield}).

\begin{theorem} 
The interacting quantum field ${\Phi}_{\rm int} ( x ) $, $x \in dS$,   satisfies the {\em covariant} equation of motion: 
	\[
		\Bigl( \square_{dS}+\mu^2 \Bigr) {\Phi}_{\rm int} ( x ) =   - {:} {\mathscr P}' ( {\Phi}_{\rm int} ( x )){:}_C \; .
	\]
\end{theorem}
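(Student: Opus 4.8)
The plan is to reduce the covariant equation of motion to its manifestly canonical counterpart on the time-zero circle, where the dynamics is generated by $L_{\rm int}^{(\alpha)}$, and then transport this back to an arbitrary space-time point using the covariance established earlier. First I would fix a wedge $W^{(\alpha)}$ and work in the chart $x(t,\psi) = \Lambda^{(\alpha)}(t/r) (0, r\sin\psi, r\cos\psi)^T$, in which, by \eqref{varepsilon} and the analogous formula adapted to $W^{(\alpha)}$, the Klein--Gordon operator takes the form $\square_{dS}+\mu^2 = (r^2\cos^2_{\psi+\alpha})^{-1}(\partial_t^2 + \varepsilon^2)$ where $\varepsilon^2 = -(\cos_{\psi+\alpha}\partial_\psi)^2 + \cos^2_{\psi+\alpha}\,\mu^2 r^2$. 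Applying this to ${\Phi}_{\rm int}(x(t,\psi)) = {\rm e}^{it L_{\rm int}^{(\alpha)}/r}\varphi(\psi){\rm e}^{-it L_{\rm int}^{(\alpha)}/r}$, the time derivatives become nested commutators: $\partial_t^2 {\Phi}_{\rm int} = -r^{-2}{\rm e}^{it L_{\rm int}^{(\alpha)}/r}[L_{\rm int}^{(\alpha)},[L_{\rm int}^{(\alpha)},\varphi(\psi)]]{\rm e}^{-it L_{\rm int}^{(\alpha)}/r}$. So the whole statement reduces to the single identity at $t=0$:
\begin{equation*}
  -[L_{\rm int}^{(\alpha)},[L_{\rm int}^{(\alpha)},\varphi(\psi)]] + \varepsilon^2\varphi(\psi) = -r^2\cos^2_{\psi+\alpha}\,{:}{\mathscr P}'(\varphi(\psi)){:}_C \; .
\end{equation*}

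Next I would compute the two commutators using the explicit form of $L_{\rm int}^{(\alpha)}$ from the stress-energy-tensor section, namely $L_{\rm int}^{(\alpha)} = \int_{S^1} r\cos(\psi'+\alpha)\,{\rm T}_{00}(\psi')\,{\rm d}\psi'$ with ${\rm T}_{00} = \tfrac12(\pi^2 + r^{-2}(\partial_\psi\varphi)^2 + \mu^2\varphi^2 + {:}P(\varphi){:}_{C_0})$, together with the canonical commutation relations $[\varphi(\psi),\pi(\psi')] = \tfrac{i}{r}\delta(\psi-\psi')$, $[\varphi,\varphi]=[\pi,\pi]=0$. The first commutator $[L_{\rm int}^{(\alpha)},\varphi(\psi)]$ picks out only the $\pi^2$ term and yields (up to constants) $r\cos_{\psi+\alpha}\,\pi(\psi)$; the second commutator then picks up the $r^{-2}(\partial_\psi\varphi)^2$, $\mu^2\varphi^2$ and ${:}P(\varphi){:}_{C_0}$ terms, producing precisely $r^2\cos^2_{\psi+\alpha}$ times $[-(\cos_{\psi+\alpha}\partial_\psi)^2\varphi + \mu^2\cos^2_{\psi+\alpha}\varphi + \cos^2_{\psi+\alpha}{:}{\mathscr P}'(\varphi){:}_{C_0}]$, where the derivative term comes from integration by parts against $\partial_{\psi'}\delta$. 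Matching this with $\varepsilon^2\varphi = -(\cos_{\psi+\alpha}\partial_\psi)^2\varphi + \mu^2\cos^2_{\psi+\alpha}\varphi$ leaves exactly the interaction term on the right-hand side, with the understanding that Wick ordering with respect to $C_0$ and with respect to the full covariance $C$ differ only by a (finite, space-time-point-independent after renormalization) constant absorbed into the definition of ${\mathscr P}$, as established in Theorem \ref{uvtheo} and the conditional-expectation computation.

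Then, to pass from the wedge to all of $dS$, I would invoke covariance: by Theorem \ref{uovos} and Definition \ref{def:6} the operators $\widehat{U}_{\rm int}(\Lambda)$ implement $SO_0(1,2)$, and ${\Phi}_{\rm int}(\Lambda x) = \widehat{U}_{\rm int}(\Lambda){\Phi}_{\rm int}(x)\widehat{U}_{\rm int}(\Lambda)^{-1}$ by construction; since $\square_{dS}+\mu^2$ and the pointwise operation $\varphi\mapsto{:}{\mathscr P}'(\varphi){:}_C$ are both Lorentz-covariant, the equation at a point $x$ in some wedge $W^{(\alpha)}$ propagates to $\Lambda x$ for all $\Lambda\in SO_0(1,2)$, and every point of $dS$ lies in some wedge. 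The main obstacle I anticipate is not the algebra of commutators but the operator-domain and Wick-ordering bookkeeping: ${\rm T}_{00}(\psi')$ is a quadratic form, $L_{\rm int}^{(\alpha)}$ is defined via the local symmetric semigroup / Araki perturbation theory rather than as a naive operator sum, and ${\Phi}_{\rm int}(x)$ is an operator-valued distribution. Making the double-commutator computation rigorous requires smearing in $\psi$ against test functions in ${\mathcal D}(I_\alpha)$, controlling the commutators on a suitable core (e.g.\ the domain $\widehat{\mathscr D}$ appearing in the proof of Theorem \ref{th5.2}, or finite-particle vectors of sufficiently high regularity), and checking that the $\varepsilon^2$ term — a pseudodifferential operator — combines correctly with the commutator coming from the spatial-gradient part of ${\rm T}_{00}$; this is where the $\varphi$-bounds of Proposition \ref{phi-bound} and the essential self-adjointness statements of Theorem \ref{keyresult1} do the real work, and I would present the formal computation first and then indicate how these estimates justify each step.
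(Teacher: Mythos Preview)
Your proposal is correct and takes essentially the same approach as the paper: work in a wedge chart, convert $\partial_t^2$ into the double commutator $-[L_{\rm int}^{(\alpha)},[L_{\rm int}^{(\alpha)},\varphi(\psi)]]$, evaluate this using the explicit form of $L_{\rm int}^{(\alpha)}$ and the CCR (the first commutator picks up only the $\pi^2$ term, the second then picks up the spatial-gradient, mass, and interaction terms via integration by parts), and identify the result with $\varepsilon^2\varphi$ plus the interaction. The paper's proof is in fact purely the formal commutator computation you outline, without the domain or Wick-ordering discussion you flag as the real obstacle; so your account is, if anything, more complete in that respect.
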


\rm
\begin{proof} Without restriction of arbitrariness,  we may consider the case $L_{\rm int}^{(\alpha)}= L_{\rm int}^{(0)}$ and 
compute (following \cite[p.~224]{RS})
	\[ 
		[ L_{\rm int}^{(0)}, \varphi (\psi) ]  = [ L^{(0)}, \varphi (\psi) ]  
		 = - i r \,\cos (\psi) \, \pi (\psi) \;  
	\]
and	
	\[
		[ L_{\rm int}^{(0)}, [ L_{\rm int}^{(0)}, \varphi (\psi) ]]  
		= [ L^{(0)}, [ L_{\rm int}^{(0)}, \varphi (\psi) ]] 
			+ [ V^{(0)}, [ L_{\rm int}^{(0)}, \varphi (\psi) ]] \; . 
	\]
The first term on the right hand side yields 
	\begin{align}
			[ L^{(0)}, [ L_{\rm int}^{(0)}, \varphi (\psi) ]]
			&= - i r \, \cos (\psi) \, [ L^{(0)},  \pi (\psi) ]   \nonumber \\  
			&
			=  - \bigl(r\, \cos (\psi)  \partial_\psi \bigr)^2 \varphi (\psi) + r^2 \cos (\psi) \, \mu^2  \varphi (\psi) \; . 
	\end{align}
The second equality follows from partial integration (see \eqref{energymomentumdensity2}), \emph{i.e.}, 
	\begin{align*}
		\frac{1}{2} \int {\rm d} \psi' \; r \cos (\psi' ) \, & 
				\left[  \bigl( \partial_{\psi'} \varphi(\psi') \bigr)^2 ,  \pi (\psi) \right] \\
				 & = - 
				\int {\rm d} \psi' \;  \partial_{\psi'} r \cos (\psi' ) \partial_{\psi'} \varphi(\psi')   
				\left[  \varphi(\psi') ,  \pi (\psi) \right]   \; .
	\end{align*}
The second term yields
	\begin{align*} 
		[ V^{(\alpha)}, [ L_{\rm int}^{(0)}, \varphi (\psi) ] ] 
			&= - i r\, \cos (\psi) \, [ V^{(\alpha)},  \pi (\psi) ]  \nonumber \\ 
			&=  - i r^2 \cos (\psi) \int {\rm d} \psi' \cos (\psi' ) \, 
				\left[   \,{:}  P (\varphi(\psi'))  {:}_{C_0} \; ,  \pi (\psi) \right]  \nonumber \\  
			& = r^2 \cos^2 (\psi) \;  {:} P' (\varphi) {:}_{C_0}  \; . 
	\end{align*}
Set $x \equiv x ( t_\alpha, \psi)$. Use definition (\ref{covinteractingfield}) and compute
	\begin{align*}
		\frac{\partial^2}{\partial t^2} \Phi_{\rm int} (x)  
		&= - [ L_{\rm int}^{(0)}, [ L_{\rm int}^{(0)}, \Phi_{\rm int} (x) ]] 
		\\ 
		& = \Bigl( \bigl(\cos (\psi) \partial_\psi \bigr)^2 - \cos (\psi) \, \mu^2   \Bigr)
		{\rm e}^{iL_{\rm int}^{(0)} t}  \varphi(\psi) {\rm e}^{- iL_{\rm int}^{(0)} t}
			\nonumber
		\\ 
		& 	\qquad \qquad \qquad \qquad 
			 -  \cos^2 (\psi )\; {:} P' (  {\rm e}^{iL_{\rm int}^{(0)} t}  \varphi(\psi) {\rm e}^{- iL_{\rm int}^{(0)} t}) {:}_{C_0}   
			\nonumber
		\\ 
		&= \bigl(\cos (\psi) \partial_\psi \bigr)^2 \Phi_{\rm int}(x)
		- r^2 \cos^2 (\psi ) \,   \Bigr(  \mu^2 \Phi_{\rm int} (x)  -  {:} P' ( \Phi_{\rm int}(x){:}_{C} \; \Bigr) \; ,  \nonumber
	\end{align*}
\emph{i.e.}, $\bigl( \partial^2_t + \varepsilon^2 \bigr) \Phi_{\rm int}(x) = r^2 \cos^2 (\psi ) \,  {:} P' ( \Phi_{\rm int}(x){:}_{C} $ with 
	\[
		\varepsilon^2  \doteq  - (\cos \psi  \, \partial_\psi)^2 + (\cos \psi )^2 \, \mu^2 r^2 \; . 
	\]
Recalling from \eqref{varepsilon} that in the coordinates $x =x (x_0, \psi)$, 
	\[
			\square_{\mathbb{W}_1}+\mu^2
				=  \frac{1}{r^2 \cos^2 \psi}\,(\partial_t^2+  \varepsilon^2) \; , 
	\]
we arrive at the {\em equations of motion} in their covariant form
	\[
		\Bigl( \square_{dS}+\mu^2 \Bigr) {\Phi}_{\rm int} (  x  ) =   - \; {:} P' ( {\Phi}_{\rm int} ( x )) {:}_C \; .
	\]
\color{black}
\end{proof}

\appendix

\chapter{One particle structures}
\label{AKay}

Let $G$ be a group. A (classical) \emph{linear dynamical system}\index{dynamical system} 
$({\mathfrak k},\sigma, \{ T_g\}_{g \in G})$
is a real symplectic vector space\index{symplectic space} $({\mathfrak k},\sigma)$ together with a  group of \emph{symplectic 
transformations}\index{symplectic transformation} $\{ T_g\}_{g \in G}$. If~${\mathfrak h}$ is a complex Hilbert space with scalar product
$\langle \, .\, , \, .\, \rangle $, then $({\mathfrak h}, 2 \Im \langle \, .\, , \, .\, \rangle)$ is a symplectic space. 
If, in addition, a unitary representation $\{ u (g)  \}_{g \in G}$ of $G$
is given, then $( {\mathfrak h}, 2 \Im \langle \, .\, , \, .\, \rangle , \{ u (g)  \}_{g \in G}  )$
is a linear dynamical system.

\begin{definition}
Given a  linear dynamical system $({\mathfrak k},\sigma, \{ T_g\}_{g \in G})$, a symplectic transformation 
$K \colon {\mathfrak k}\to {\mathfrak h}$  defines  a \emph{one-particle quantum structure}\index{one-particle quantum structure} 
on a Hilbert space~${\mathfrak h}$, if there exists  a group of unitary operators such that 
the following diagram commutes

\bigskip
\label{page48}
\vskip -.8cm

\begin{picture}(180,140)


\put(120,100){$\longrightarrow$}
\put(60,100){$({\mathfrak k}, \sigma )$}
\put(125,110){$K$}
\put(55,70){$ {T_g}$}
\put(200,70){$u(g)$}
\put(125,50){$K$}

\put(170,100){$ ({\mathfrak h}, 2\Im \langle \, .\, , \, .\, \rangle)$}

\put(60,40){$({\mathfrak k}, \sigma )$}

\put(170,40){$({\mathfrak h}, 2\Im \langle \, .\, , \, .\, \rangle)$ \; \; . }

\put(120,40){$\longrightarrow$}

\put(75,85){\vector(0,-3){20}}
\put(190,85){\vector(0,-3){20}}

\end{picture}

\vskip -1cm
\end{definition}

\noindent
By definition,  $K$ is injective.  
Kay \cite{Kay0, Kay1, Kay2} has shown 
that one can associate several essentially unique
one-particle quantum structures to a given classical dynamical system.

\begin{definition}
Given a {\em linear dynamical system} $({\mathfrak k},\sigma, \{ T_t\}_{t \in \mathbb{R}})$, 
the symplectic transformation 
$K$  specifies
\begin {itemize}
\item [---] a \emph{one-particle structure with positive energy}\index{one-particle structure with positive energy}, 
if 
\begin{itemize}
\item[$ i.)$] $t \mapsto u(t)$ is strongly continuous and its generator $\varepsilon\ge 0$ is positive;   
\item[$ ii.)$]  $K{\mathfrak k}$ is dense in ${\mathfrak h}$.   
\end{itemize}
\goodbreak
\item [---] a \emph{one-particle $\beta$-KMS structure}\index{one-particle $\beta$-KMS structure}, if 
\begin{itemize}
\item[$ iii.)$] the map $t\mapsto \langle K {\mathfrak f},u(t)K {\mathfrak g} \rangle$, ${\mathfrak f}, {\mathfrak g} \in {\mathfrak k}$,  
is analytic in the strip $\{ t \in \mathbb{C} \mid 0< \Im t<\beta \} $, continuous at the boundary, and satisfies the one-particle 
$\beta$-KMS condition 
	\begin{equation} 
		\label{o-p-kms-condition}
			\quad \qquad \langle K{\mathfrak f},u(t+i\beta)K{\mathfrak g} \rangle 
			= \langle u(t) K{\mathfrak g},K{\mathfrak f} \rangle \; , \; \;   t\in\mathbb{R}  \; ,  
			\; {\mathfrak f}, {\mathfrak g} \in {\mathfrak k} \; ;
	\end{equation}
\item[$ iv.)$] $K{\mathfrak k}+iK{\mathfrak k}$ is dense in ${\mathfrak h}$.
\end{itemize}
\end{itemize}
Note that the Hilbert space ${\mathfrak h}$ and the one-parameter group $t \mapsto u(t)$ acting on it, 
although denoted by the same letters in $ i.)$--$ ii.)$ 
{\em and} $ iii.)$--$ iv.)$, are necessarily different in the two distinct cases.  
\end{definition}

\begin{proposition}
[Kay \cite{Kay1}, Theorems 1a \& 1b] 
\label{Kay Th}
There exists
a unique  (up to unitary equivalence) one-particle structure with positive energy 
for which  zero is not an eigenvalue of the generator of 
$t \mapsto u(t)$. Moreover, for each $\beta >0$ there exists a unique  (up to unitary equivalence) one-particle
$\beta$-KMS structure for which  zero is not an eigenvalue of the generator of 
$t \mapsto u(t)$.
\end{proposition}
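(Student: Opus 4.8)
The strategy is to reduce the assertion --- in both forms, the positive-energy case (Kay's Theorem~1a) and the $\beta$-KMS case (Theorem~1b) --- to a rigidity statement about a single real symmetric bilinear form on $\mathfrak k$, and then to prove that rigidity by a Hardy-space/strip argument. The point is that a one-particle structure $(K,\mathfrak h,u)$ for $(\mathfrak k,\sigma,\{T_t\})$ is, up to unitary equivalence, the same datum as a positive symmetric bilinear form $\mu$ on $\mathfrak k$ obeying the Schwarz-type bound $\tfrac14\,\sigma(f,g)^2\le\mu(f,f)\,\mu(g,g)$, together with $T_t$-invariance and the relevant spectral side condition (positive generator, resp.\ the KMS relation): from such a $\mu$ one reconstructs $\mathfrak h$ as the completion of $\mathfrak k$ in the Hermitian form $\mu+\tfrac{i}{2}\sigma$, with complex structure given by the polar decomposition of $\sigma$ relative to $\mu$, $K$ the canonical inclusion, and $u(t)$ the continuous extension of $T_t$; conversely a structure produces $\mu\doteq\Re\langle K\,\cdot\,,K\,\cdot\,\rangle$. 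The density hypotheses ($K\mathfrak k$ dense, resp.\ $K\mathfrak k+iK\mathfrak k$ dense) are precisely what makes these two passages mutually inverse modulo unitary equivalence, and what allows an equality $\mu_1=\mu_2$ to be upgraded to a unitary $W\colon\mathfrak h_1\to\mathfrak h_2$ with $WK_1=K_2$ and $Wu_1(t)=u_2(t)W$. Thus it suffices to (i) exhibit one admissible $\mu$ and (ii) prove it is unique.

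For (i): in the positive-energy case one passes to a complexification, regards the symplectic generator of $T_t$ as $-iH$ with $H$ self-adjoint, and takes for $\mu$ the standard positive-frequency (Fock-vacuum) two-point form associated with the spectral projection $\mathbb{1}_{(0,\infty)}(H)$; the bound and the intertwining $u(t)K=KT_t$ follow from the spectral calculus, the generator $\varepsilon$ of the resulting $u$ is $H$ restricted to ${\rm ran}\,\mathbb{1}_{(0,\infty)}(H)\ge0$, and the hypothesis ``$0$ is not an eigenvalue'' transfers to $\varepsilon$. In the $\beta$-KMS case one uses instead the Araki--Woods form built from $\coth(\beta H/2)$ (with the doubling that adjoins an anti-Fock summand), exactly as in the explicit one-particle KMS structures of Section~\ref{KMSops} and the Araki--Woods construction recalled in~\ref{Kbeta}; checking analyticity of $t\mapsto\langle Kf,u(t)Kg\rangle$ in the strip $\{0<\Im t<\beta\}$, continuity of its boundary values, and the boundary identity~\eqref{o-p-kms-condition} is a direct computation with the explicit kernel.

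For (ii), the substantive part, fix a structure and put $F_{f,g}(t)\doteq\langle Kf,u(t)Kg\rangle=\mu(f,T_tg)+\tfrac{i}{2}\,\sigma(f,T_tg)$, so that $2\,\Im F_{f,g}(t)=\sigma(f,T_tg)$ is known a priori. In the positive-energy case $u(t)={\rm e}^{-it\varepsilon}$ with $\varepsilon\ge0$ and $\ker\varepsilon=\{0\}$, so $F_{f,g}$ is the boundary value of a bounded function holomorphic in the lower half-plane and vanishing at infinity; hence $\Re F_{f,g}$ is the Hilbert transform of $\Im F_{f,g}$, \emph{with no additive constant}, and therefore $\mu(f,T_\cdot g)$ is a fixed linear functional of $\sigma(f,T_\cdot g)$ --- the same for every admissible structure. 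This gives $\mu_1=\mu_2$, hence the unitary equivalence by the reduction above. In the $\beta$-KMS case the analogous input is that $F_{f,g}$ is bounded and holomorphic in the strip $0<\Im t<\beta$, with $F_{f,g}(t+i\beta)=\overline{F_{f,g}(t)}$ on the real axis and $2\,\Im F_{f,g}(t)=\sigma(f,T_tg)$; a two-lines (Phragm\'en--Lindel\"of) argument shows these data pin down $F_{f,g}$ uniquely, once again because the absence of a zero eigenvalue of $\varepsilon$ removes the constant ambiguity. So $\mu$, and with it the structure, is unique.

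\textbf{Main obstacle.} The delicate points all lie in part~(ii): (a) making the spectral and Hardy-space manipulations rigorous when $\mathfrak k$ carries at most a weak topology and $\varepsilon$ is unbounded --- one works on the cyclic subspaces generated by finitely many vectors $Kf$ and must be careful about cores and about boundary regularity of the analytic functions $F_{f,g}$; (b) seeing exactly why ``$0$ not an eigenvalue of the generator'' is indispensable: on a would-be kernel of $\varepsilon$ neither the positivity nor the KMS condition constrains $\mu$ at all, so uniqueness genuinely fails there, and this constant-mode ambiguity is precisely what the Hardy/strip arguments must exclude; and (c) promoting $\mu_1=\mu_2$ to a genuine unitary, which is where the density of $K\mathfrak k$ (resp.\ of $K\mathfrak k+iK\mathfrak k$) enters, via the standard extension of an inner-product-preserving, everywhere-intertwining map defined on a dense set. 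For these three items I would invoke Kay~\cite{Kay1} rather than reproduce the estimates; the constructive part~(i) is, in the situations actually used in this memoir, already carried out explicitly in Propositions~\ref{OnePartdS}, \ref{restrictedOnePartdS} and~\ref{OnePart}.
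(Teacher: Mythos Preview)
The paper does not prove this proposition at all: it is stated as a citation of Kay~\cite{Kay1}, Theorems~1a and~1b, with no argument supplied. Your sketch is therefore not to be compared against anything in the paper; what can be said is that it correctly reproduces the architecture of Kay's original proof --- the reduction to the real bilinear form $\mu=\Re\langle K\,\cdot\,,K\,\cdot\,\rangle$, the reconstruction of $(\mathfrak h,K,u)$ from $\mu+\tfrac{i}{2}\sigma$ via the polar decomposition of $\sigma$ relative to $\mu$, and the Hardy-space/strip rigidity argument fixing $\Re F_{f,g}$ from $\Im F_{f,g}=\tfrac12\sigma(f,T_\cdot g)$, with the absence of a zero eigenvalue killing the additive-constant ambiguity. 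Your identification of the density hypotheses as exactly what is needed to promote $\mu_1=\mu_2$ to a unitary equivalence is also on point. Since you already say you would invoke \cite{Kay1} for the analytic details, your proposal is adequate and matches what the paper intends by citing Kay.
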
 

\paragraph{\it Notation.} If ${\mathfrak h}$ is a
complex vector space, then the \emph{conjugate vector space}\index{conjugate vector space} $\overline {\mathfrak h}$ is the
real vector space ${\mathfrak h}$ equipped with the complex structure $-i$. We 
denote by 
	\[ 
		{\mathfrak h} \ni h\mapsto  \overline{h} \in \overline{\mathfrak h} 
	\] 
the {\em linear} identity operator. If ${\mathfrak h}$ is a Hilbert
space, then the conjugate Hilbert space~$\bar {\mathfrak h}$ is equipped with the scalar 
product~$(\overline{h_{1}}, \overline{h_{2}})\doteq (h_{2}, h_{1})$. If $a\in {\mathcal  L}({\mathfrak h})$, 
then we denote by $\overline{a}\in {\mathcal  L}(\bar {\mathfrak h})$ the {\em linear} operator
$\overline{a}\overline{h}\doteq \overline{ah}$. 

\bigskip 
Given a one-particle structure with positive energy there exists an associated  one-particle $\beta$-KMS structure:

\begin{proposition}  \label{Kbeta} Let $(K, {\mathfrak h},\{ u(t)\}_{t \in \mathbb{R}})$ be a one-particle  structure with positive energy for a
classical dynamical system $({\mathfrak k},\sigma, \{ T_t \}_{t \in \mathbb{R}})$. If $K{\mathfrak k} \in {\mathcal  D}(\varepsilon^{-1/2})$, 
then  
\label{page49}
	\begin{align*}
		K_{\scriptscriptstyle \rm AW}  \mathfrak{f} 
		&\doteq (1+\varrho)^{\frac{1}{2}} K\mathfrak{f} \oplus  \varrho^{\frac{1}{2}} K\mathfrak{f}  \; , 
				\qquad \varrho\doteq ({\rm e}^{\beta\varepsilon}-1)^{-1} \; ,   \\  
		{\mathfrak h}_{\scriptscriptstyle \rm AW}  
		&\doteq  {\mathfrak h}\oplus \overline{{\mathfrak h}} 
		\; ,   \\  
		u_{\scriptscriptstyle \rm AW} (t) &\doteq  u(t)\oplus \overline{u(t)} \; ,  
	\end{align*}
defines a one particle $\beta$-KMS structure for  $({\mathfrak k},\sigma, \{ T_t \}_{t \in \mathbb{R}})$. 
\end{proposition}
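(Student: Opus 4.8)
The plan is to verify directly that the quadruple $\bigl({\mathfrak h}_{\scriptscriptstyle \rm AW}, K_{\scriptscriptstyle \rm AW}, \{ u_{\scriptscriptstyle \rm AW}(t)\}_{t\in\mathbb{R}}\bigr)$ satisfies the four defining properties $i.)$--$iv.)$ of a one-particle $\beta$-KMS structure for $({\mathfrak k},\sigma,\{T_t\}_{t\in\mathbb{R}})$, after first checking that $K_{\scriptscriptstyle \rm AW}$ is a well-defined symplectic map. The hypothesis $K{\mathfrak k}\subset{\mathscr D}(\varepsilon^{-1/2})$ is exactly what is needed for $\varrho^{1/2}K{\mathfrak f}$ and $(1+\varrho)^{1/2}K{\mathfrak f}$ to make sense: as in the proof of Proposition~\ref{OnePart}, one uses the elementary bounds $0<\tfrac{{\rm e}^{-\lambda}}{1+{\rm e}^{-\lambda}},\ \tfrac{1}{1+{\rm e}^{-\lambda}}\le \max(1,\lambda^{1/2})$ (here with ${\rm e}^{-\beta\varepsilon}$ in place of ${\rm e}^{-\lambda}$, noting $\varrho=\tfrac{{\rm e}^{-\beta\varepsilon}}{1-{\rm e}^{-\beta\varepsilon}}$ and $1+\varrho=\tfrac{1}{1-{\rm e}^{-\beta\varepsilon}}$), so that $\varrho$ and $1+\varrho$ have $\mathscr{D}(\varepsilon^{-1/2})$ in their domains.

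First I would check the symplectic property. Computing twice the imaginary part of the scalar product in ${\mathfrak h}_{\scriptscriptstyle \rm AW}={\mathfrak h}\oplus\overline{\mathfrak h}$, the contribution of the first summand is $2\Im\langle (1+\varrho)^{1/2}K{\mathfrak f},(1+\varrho)^{1/2}K{\mathfrak g}\rangle=2\Im\langle K{\mathfrak f},(1+\varrho)K{\mathfrak g}\rangle$, while the second summand contributes $2\Im\langle \varrho^{1/2}K{\mathfrak f},\varrho^{1/2}K{\mathfrak g}\rangle_{\overline{\mathfrak h}}=2\Im\overline{\langle K{\mathfrak f},\varrho K{\mathfrak g}\rangle}=-2\Im\langle K{\mathfrak f},\varrho K{\mathfrak g}\rangle$; since $\varrho$ and $1+\varrho$ commute with $u(t)$ and are self-adjoint, the two cross terms cancel and the sum collapses to $2\Im\langle K{\mathfrak f},K{\mathfrak g}\rangle=\sigma({\mathfrak f},{\mathfrak g})$, because $(K,{\mathfrak h},u)$ is already symplectic. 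This is the same computation as \eqref{eqImbeta} and reuses that identity essentially verbatim. Property $iii.)$ --- strong continuity of $t\mapsto u_{\scriptscriptstyle\rm AW}(t)=u(t)\oplus\overline{u(t)}$ --- is immediate from strong continuity of $t\mapsto u(t)$.

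Next I would establish the KMS analyticity and boundary condition $iii.)$. The generator of $t\mapsto u_{\scriptscriptstyle\rm AW}(t)$ is $\varepsilon\oplus(-\varepsilon)$; on the first summand the function $t\mapsto\langle (1+\varrho)^{1/2}K{\mathfrak f}, u(t)(1+\varrho)^{1/2}K{\mathfrak g}\rangle$ extends analytically to $0<\Im t<\beta$ because ${\rm e}^{-\beta\varepsilon}(1+\varrho)=\varrho$ is bounded, so ${\rm e}^{-\beta\varepsilon}(1+\varrho)^{1/2}K{\mathfrak g}$ is again in ${\mathfrak h}$; evaluating at $t+i\beta$ and using ${\rm e}^{-\beta\varepsilon}(1+\varrho)=\varrho$ converts the $(1+\varrho)$-weight on the second argument into a $\varrho$-weight, which matches precisely the term coming from the $\overline{\mathfrak h}$-summand with the roles of ${\mathfrak f}$ and ${\mathfrak g}$ interchanged (the complex conjugation in $\overline{\mathfrak h}$ and the sign flip of the generator conspire correctly). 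Assembling the two contributions yields the one-particle $\beta$-KMS identity \eqref{o-p-kms-condition}; this is the technical heart and I expect it to be the main obstacle, mostly bookkeeping of which weight sits on which argument and tracking the conjugate-Hilbert-space signs.

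Finally I would verify the density condition $iv.)$, namely that $K_{\scriptscriptstyle\rm AW}{\mathfrak k}+iK_{\scriptscriptstyle\rm AW}{\mathfrak k}$ is dense in ${\mathfrak h}\oplus\overline{\mathfrak h}$. Since $(1+\varrho)$ and $\varrho$ are strictly positive (hence injective with dense range on the relevant domain) and $K{\mathfrak k}$ is dense in ${\mathfrak h}$, one argues as in step $ii.)$ of the proof of Proposition~\ref{OnePart}: the real-linear span $K_{\scriptscriptstyle\rm AW}{\mathfrak k}+iK_{\scriptscriptstyle\rm AW}{\mathfrak k}$ contains $(1+\varrho)^{1/2}(K{\mathfrak k}+iK{\mathfrak k})\oplus\varrho^{1/2}(K{\mathfrak k}+iK{\mathfrak k})$; using that $K{\mathfrak k}+iK{\mathfrak k}$ is dense in ${\mathfrak h}$ (which follows already from $i.)$ of the positive-energy structure together with the fact that a complex-dense real subspace plus $i$ times itself is dense) and that $(1+\varrho)^{1/2}$, $\varrho^{1/2}$ have dense range, one concludes density of the direct sum. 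Collecting steps, Definition of a one-particle $\beta$-KMS structure (the $iii.)$--$iv.)$ clause in the definition preceding Proposition~\ref{Kay Th}) is satisfied, which proves the proposition.
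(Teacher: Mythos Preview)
The paper states this proposition without proof (it is the standard Araki--Woods construction, credited via the subscript and the reference to \cite{AW}), so there is no paper argument to compare against. Your overall strategy---verifying the symplectic property, the intertwining/KMS relation, and density---is the correct one, and your symplectic computation (which is indeed \eqref{eqImbeta} transplanted) and your KMS verification are essentially right; the identity ${\rm e}^{-\beta\varepsilon}(1+\varrho)=\varrho$ and its companion $\varrho\,{\rm e}^{\beta\varepsilon}=1+\varrho$ are exactly what swap the weights at the boundary.

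There is, however, a genuine gap in your density argument. You assert that $K_{\scriptscriptstyle\rm AW}{\mathfrak k}+iK_{\scriptscriptstyle\rm AW}{\mathfrak k}$ contains $(1+\varrho)^{1/2}(K{\mathfrak k}+iK{\mathfrak k})\oplus\varrho^{1/2}(K{\mathfrak k}+iK{\mathfrak k})$ with the two summands chosen \emph{independently}. This is false: because the complex structure on $\overline{{\mathfrak h}}$ is $-i$, one has
\[
K_{\scriptscriptstyle\rm AW}{\mathfrak f}+iK_{\scriptscriptstyle\rm AW}{\mathfrak g}
=(1+\varrho)^{1/2}(K{\mathfrak f}+iK{\mathfrak g})\oplus\overline{\varrho^{1/2}(K{\mathfrak f}-iK{\mathfrak g})},
\]
so the two components are tied to the \emph{same} ${\mathfrak f},{\mathfrak g}$; solving for prescribed independent components would force $(p+q)/2\in K{\mathfrak k}$ and $(p-q)/(2i)\in K{\mathfrak k}$, which fails in general. (The analogous step in the proof of Proposition~\ref{OnePart} works because there $j$ sends $\widehat{\mathfrak d}(I_+)$ to the \emph{orthogonal} subspace $\widehat{\mathfrak d}(I_-)$, but even that displayed chain of equalities is somewhat informal.) The fix is to argue via the orthogonal complement: if $a\oplus\overline{b}\in{\mathfrak h}\oplus\overline{{\mathfrak h}}$ is orthogonal to every $K_{\scriptscriptstyle\rm AW}{\mathfrak f}$, then
\[
\langle a,(1+\varrho)^{1/2}K{\mathfrak f}\rangle+\overline{\langle b,\varrho^{1/2}K{\mathfrak f}\rangle}=0
\qquad\forall\,{\mathfrak f}\in{\mathfrak k}.
\]
Since $K{\mathfrak k}$ is dense, the left-hand side is the sum of a linear and a conjugate-linear functional of $K{\mathfrak f}$ that vanishes on a dense set; replacing $K{\mathfrak f}$ by $h$ and then by $ih$ shows both functionals vanish identically, whence $a=b=0$ by injectivity of $(1+\varrho)^{1/2}$ and $\varrho^{1/2}$. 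You should also record, if only in a line, the intertwining $u_{\scriptscriptstyle\rm AW}(t)K_{\scriptscriptstyle\rm AW}=K_{\scriptscriptstyle\rm AW}T_t$, which is immediate from $[\varrho,u(t)]=0$ together with $u(t)K=KT_t$.
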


\paragraph{\it  Remarks:} 
\begin{enumerate}
\item [$ i.)$] 
The subscripts used in $K_{\scriptscriptstyle \rm AW}$, ${\mathfrak h}_{\scriptscriptstyle \rm AW}$ and 
$u_{\scriptscriptstyle \rm AW} (t)$ pay tribute to the fundamental work of Araki and Woods~\cite{AW}.
\item [$ ii.)$] 
$({\mathfrak h}_{\scriptscriptstyle \rm AW},\{ u_{\scriptscriptstyle \rm AW} (t)\}_{t \in \mathbb{R}})$~is a 
one-particle $\beta$-KMS structure  for the dynamical system 
$({\mathfrak h}, \Im \langle \, . \,  ,  \, . \, \rangle ,\{ u(t)\}_{t \in \mathbb{R}})$,  
specified by  ${\mathcal K}_{\scriptscriptstyle \rm AW} \colon {\mathfrak h} \to {\mathfrak h}_{\scriptscriptstyle \rm AW}$, 
	\[ 
		h \mapsto (1+\varrho)^{\frac{1}{2}} h \oplus  \varrho^{\frac{1}{2}} h  \; . 
	\]
\item [$ iii.)$] 
$\overline{u(t)}=\overline{{\rm e}^{it\varepsilon}}={\rm e}^{-it 
\varepsilon}$, hence the generator of the one-parameter group
	\[
		t \mapsto \overline{u(t)} 
	\]
has  {\em negative}  spectrum. 
\item [$ iv.)$] The space
${\mathfrak h}^{\scriptscriptstyle \rm L}\doteq \{ K_{\scriptscriptstyle \rm AW} \mathfrak{f}\mid \mathfrak{f} \in  {\mathfrak k} \}$ 
is a real subspace in 
${\mathfrak h}_{\scriptscriptstyle \rm AW}$. Moreover, ${\mathfrak h}^{\scriptscriptstyle \rm L} 
+ i {\mathfrak h}^{\scriptscriptstyle \rm L}$ is dense in ${\mathfrak h}_{\scriptscriptstyle \rm AW}$
and ${\mathfrak h}^{\scriptscriptstyle \rm L}  \cap i {\mathfrak h}^{\scriptscriptstyle \rm L}  = \{0\}$. 
Thus one can define, following  
Eckmann and Osterwalder \cite{EO} (see also \cite{LRT}), a closeable operator 
	\begin{equation} 
		\begin{matrix}
			s: & {\mathfrak h}^{\scriptscriptstyle \rm L} & 
				+ & i {\mathfrak h}^{\scriptscriptstyle \rm L}  
				& \to&  {\mathfrak h}^{\scriptscriptstyle \rm L}  & 
				+ & i {\mathfrak h}^{\scriptscriptstyle \rm L}  \\
				& f & + & i g & \mapsto & f & - & i g
						\end{matrix} \; \; . 
	\end{equation}
The polar decomposition of its closure $\overline {s} = j \delta^{1/2}$
provides 
\begin{itemize}
\item[---] an anti-unitary involution (\emph{i.e.}, a \emph{conjugation}\index{conjugation})
	\begin{equation} 
		\label{eckmanj}
		\begin{matrix}
		j: & {\mathfrak h}\oplus \overline{{\mathfrak h}}  & \to&  {\mathfrak h}\oplus \overline{{\mathfrak h}}  \\
			& f  \oplus   g & \mapsto &   \overline{g}  \oplus  \overline{ f }
		\end{matrix}  \; \;  ; 
	\end{equation}
\item[---] a complex linear, positive operator $\delta^{1/2} $, such that
	\begin{equation}
		\label{eckmand}
		\delta^{it}=  u_{\scriptscriptstyle \rm AW} (- t \beta)\; ,  \qquad t \in \mathbb{R} \; . 
	\end{equation}
\end{itemize}
\eqref{eckmanj} implies $ j {\mathfrak h}^{\scriptscriptstyle \rm L}=   {\mathfrak h}^{\scriptscriptstyle \rm R}$
and  \eqref{eckmand} implies that
$\{ \delta^{it}\}_{t \in \mathbb{R}}$  leaves the subspaces~${\mathfrak h}^{\scriptscriptstyle \rm L}$ 
and~${\mathfrak h}^{\scriptscriptstyle \rm R}$ invariant. 
\item [$ v.)$] 
Sometimes we denote $K_{\scriptscriptstyle \rm AW}$ by
$K_{\scriptscriptstyle \rm AW}^{\scriptscriptstyle \rm L}$. This is useful as one encounters as well
the map 
$K^{\scriptscriptstyle \rm R}_{\scriptscriptstyle \rm AW} 
\colon {\mathfrak k} \to {\mathfrak h}_{\scriptscriptstyle \rm AW}$, 
	\begin{equation}
	\label{eqUbeta2}
	 	K^{\scriptscriptstyle \rm R}_{\scriptscriptstyle \rm AW} \mathfrak{g} 
		\doteq   \varrho^{\frac{1}{2}} K\mathfrak{g} \oplus  (1+\varrho)^{\frac{1}{2}} K\mathfrak{g} \;  ,
	 \end{equation} 
which maps ${\mathfrak k}$ to the symplectic complement 
${\mathfrak h}^{\scriptscriptstyle \rm R} \subset {\mathfrak h}_{\scriptscriptstyle \rm AW}$ 
of  ${\mathfrak h}^{\scriptscriptstyle \rm L}$.
\item [$ vi.)$] \label {sechs} 
The triple  $(K^{\scriptscriptstyle \rm R}_{\scriptscriptstyle \rm AW}, {\mathfrak h}_{\scriptscriptstyle \rm AW},
\{ u_{\scriptscriptstyle \rm AW} (t) \}_{t \in \mathbb{R} } )$ 
provides a $(-\beta)$-KMS
structure for
the linear dynamical system $({\mathfrak k},\sigma, \{T_t\}_{t \in \mathbb{R} } )$.
\end{enumerate}

\bigskip
The existence of  $vi.)$ motivated Kay \cite{Kay1, Kay2}  to investigate the possibility of doubling the 
classical dynamical system as well:

\begin{definition}
\label{dcldsy}
Let $\underline { {\mathfrak k}} ={\mathfrak k}_{\scriptscriptstyle \rm R} 
\oplus {\mathfrak k}_{\scriptscriptstyle \rm L}  $ be the direct sum of  two symplectic subspaces 
${\mathfrak k}_{\scriptscriptstyle \rm R} $ and~$  {\mathfrak k}_{\scriptscriptstyle \rm L}  $ 
such that 
	\[
		\underline { \sigma} (  \mathfrak{f},   \mathfrak{g}) 
			= 0 \quad \hbox{if}  \quad \mathfrak{f} \in {\mathfrak k}_{\scriptscriptstyle \rm L}
			\quad \hbox{and}  \quad  \mathfrak{g} \in 
			{\mathfrak k}_{\scriptscriptstyle \rm R} \; . 
	\]
Let $\{ \underline { T}_{\, t} \}_{t \in \mathbb{R}}$ be a one-parameter group of symplectic maps,  
which leaves~${\mathfrak k}_{\scriptscriptstyle \rm L} $ and ${\mathfrak k}_{\scriptscriptstyle \rm R}$ invariant.
Furthermore, let $ \underline{\imath}$ be an anti-symplectic involution such that 
	\[
		[ \, \underline  {T}_{\, t} , \underline {\imath}\, ]=0 \quad \hbox{and} \quad
		\underline {\imath} \, {\mathfrak k}_{\scriptscriptstyle \rm L} 
		= {\mathfrak k}_{\scriptscriptstyle \rm R} \; .
	\]
The quadruple $(\, \underline{{\mathfrak k}}, \underline {\sigma}, \{ \underline { T}_{\, t}\}_{t \in \mathbb{R} } , 
\underline {\imath}\,)$
is called a {\em double} (classical) linear dynamical system. 
\end{definition}

\goodbreak
It follows that  
$\underline {\imath}\, {\mathfrak k}_{\scriptscriptstyle \rm R} = {\mathfrak k}_{\scriptscriptstyle \rm L}$.
In other words, the following diagram commutes:
\vskip -.5cm
\begin{picture}(200,140)


\put(120,100){$\longrightarrow$}
\put(60,100){$({\mathfrak k}_{\scriptscriptstyle \rm L}, \underline{ \sigma})$}
\put(125,110){$\underline{\imath}$}
\put(55,70){$ {\underline{T}_{\, t}}$}
\put(200,70){$\underline{T}_{\, t}$}
\put(125,50){$\underline{\imath}$}

\put(170,100){$ ({\mathfrak k}_{\scriptscriptstyle \rm R}, \underline{\sigma})$}

\put(60,40){$({\mathfrak k}_{\scriptscriptstyle \rm L}, \underline{\sigma})$}

\put(170,40){$({\mathfrak k}_{\scriptscriptstyle \rm R}, \underline{\sigma})$ \; \; . }

\put(120,40){$\longleftarrow$}

\put(75,85){\vector(0,-3){20}}
\put(190,85){\vector(0,-3){20}}

\end{picture}

\vskip -.8cm
\goodbreak
\begin{definition} 
\label{dbops} (Kay \cite{Kay1}, Def.~3).
A \emph{double $\beta$-KMS one-particle structure}\index{double $\beta$-KMS one-particle structure}, 
\emph{i.e.}, a quadruple  $(\underline{K},  {\mathfrak h} ,  
\{ \underline{\delta}^{-i t / \beta}\}_{t \in \mathbb{R}}  , \underline{j} )$,
associated to a double  linear classical dynamical system 
$(\underline{{\mathfrak k}}, \underline{\sigma}, \{ \underline{T}_{\, t} \}_{t \in \mathbb{R}}, 
\underline{\imath}\, )$ consists of 
\begin{itemize}
\item [$ i.)$] a complex Hilbert space ${\mathfrak h}$; 
\item [$ ii.)$] a real linear symplectic map $\underline{K} \colon \underline{ {\mathfrak k} }\to  {\mathfrak h} $ 
such that $\underline{ K}{\mathfrak k}_{\scriptscriptstyle \rm L} 
+ i \underline{ K }{\mathfrak k}_{\scriptscriptstyle \rm L} $ is dense in ${\mathfrak h}$; 
\item [$ iii.)$] a strongly continuous unitary group 
$t \mapsto \underline{\delta}^{-i t / \beta} $ such that 
\begin{itemize}
\item[---] $\underline{\delta}^{-i t / \beta} \circ \underline{ K } = \underline{ K }\circ  \underline{ T}_{\, t}  $ 
for all\footnote{This is in agreement with 
\eqref{eckmand}.} $t \in \mathbb{R} $; 
\item[---] $\underline{K }{\mathfrak k}_{\scriptscriptstyle \rm L} 
+ i \underline{ K }{\mathfrak k}_{\scriptscriptstyle \rm L} 
\subset {\mathscr D} \bigl( \underline{\delta}^{1 /2} \bigr)$;
\end{itemize}
\item [$ iv.)$] an anti-unitary operator $\underline{j}$ such that $ \underline{j} \circ \underline{ K }
= \underline{ K} \circ \underline \imath$ on $\underline{ {\mathfrak k}}$ and 
	\[ 
		\underline{j} \underline{\delta}^{1 /2} f 
		=   f \qquad  \forall f \in \underline{ K }{\mathfrak k}_{\scriptscriptstyle \rm L} \; . 
	\]
\end{itemize}
\end{definition}

\bigskip
\noindent
The operator $\underline{\delta}$ is positive,  
$\underline {K} {\mathfrak k}_{\scriptscriptstyle \rm R} +  i \underline {K} {\mathfrak k}_{\scriptscriptstyle \rm R} $ 
is dense in~$ {\mathfrak h}$,  
	\[ 
		\underline { K} {\mathfrak k}_{\scriptscriptstyle \rm R} + i \underline {K} 
		{\mathfrak k}_{\scriptscriptstyle \rm R}  \subset {\mathscr D} \bigl( \underline{\delta}^{-1 /2}\bigr) 
	\]
and $\underline{j} \, \underline{\delta}^{-1/2}  g =  g $ for all 
$g \in \underline{ K }{\mathfrak k}_{\scriptscriptstyle \rm R} $. 

\goodbreak
\begin{theorem}[Kay \cite{Kay1}, Theorem 2] 
\label{ThB1}
There exists a unique, up to unitary equivalence, double $\beta$-KMS-structure for which
the generator~$\underline{\varepsilon}$ of 
the one parameter group 
	\[
\underline{\delta}^{i t } ={\rm e}^{-i t \beta \underline{\varepsilon}}, \qquad \beta >0 \; , 
	\] 
has no zero eigenvalue. 
\end{theorem}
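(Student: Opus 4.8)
The plan is to reduce the assertion to the corresponding statement for a \emph{single} $\beta$-KMS one-particle structure, namely Proposition~\ref{Kay Th}, and then to recover the extra data carried by a double structure --- the second symplectic map $\underline{K}_{\upharpoonright {\mathfrak k}_{\scriptscriptstyle \rm R}}$ and the conjugation $\underline{j}$ --- from the anti-symplectic involution $\underline{\imath}$ together with the Tomita--Takesaki modular data associated with the real subspace $\underline{K}{\mathfrak k}_{\scriptscriptstyle \rm L}$. Throughout, the normalisation $\underline{\delta}^{it}={\rm e}^{-it\beta\underline{\varepsilon}}$ means that $\underline{\delta}^{-it/\beta}$ is the one-parameter group implementing $\underline{T}_t$ on the image of $\underline{K}$, so ``zero is not an eigenvalue of $\underline{\varepsilon}$'' is a spectral condition on that group.

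For existence I would start from a one-particle structure with positive energy $(K,{\mathfrak h},\{u(t)\}_{t\in\mathbb{R}})$ for the linear dynamical system $\bigl({\mathfrak k}_{\scriptscriptstyle \rm L},\underline{\sigma}_{\upharpoonright {\mathfrak k}_{\scriptscriptstyle \rm L}},\{\underline{T}_{t\upharpoonright {\mathfrak k}_{\scriptscriptstyle \rm L}}\}\bigr)$ with $K{\mathfrak k}_{\scriptscriptstyle \rm L}\subset{\mathscr D}(\varepsilon^{-1/2})$ and zero not an eigenvalue of $\varepsilon$; such a structure is unique up to unitary equivalence by Proposition~\ref{Kay Th}, and in the situations of interest it is supplied, e.g., by the restricted structure of Proposition~\ref{restrictedOnePartdS}. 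The Araki--Woods doubling of Proposition~\ref{Kbeta} then produces ${\mathfrak h}_{\scriptscriptstyle \rm AW}={\mathfrak h}\oplus\overline{{\mathfrak h}}$, the maps $K^{\scriptscriptstyle \rm L}_{\scriptscriptstyle \rm AW}$, $K^{\scriptscriptstyle \rm R}_{\scriptscriptstyle \rm AW}$, the group $u_{\scriptscriptstyle \rm AW}(t)=u(t)\oplus\overline{u(t)}$, and the conjugation $j$ of \eqref{eckmanj}. Setting $\underline{K}_{\upharpoonright {\mathfrak k}_{\scriptscriptstyle \rm L}}=K^{\scriptscriptstyle \rm L}_{\scriptscriptstyle \rm AW}$, $\underline{K}_{\upharpoonright {\mathfrak k}_{\scriptscriptstyle \rm R}}=K^{\scriptscriptstyle \rm R}_{\scriptscriptstyle \rm AW}\circ(\underline{\imath}_{\upharpoonright {\mathfrak k}_{\scriptscriptstyle \rm R}})^{-1}$, $\underline{\delta}=\delta$ and $\underline{j}=j$, I would verify the four axioms of Definition~\ref{dbops} directly from the remarks $iv.)$--$vi.)$ following Proposition~\ref{Kbeta}: density of $\underline{K}{\mathfrak k}_{\scriptscriptstyle \rm L}+i\underline{K}{\mathfrak k}_{\scriptscriptstyle \rm L}$ is remark $iv.)$, the intertwining relations are \eqref{eckmand} and the compatibility $j\circ\underline{K}=\underline{K}\circ\underline{\imath}$, and $\underline{j}\,\underline{\delta}^{1/2}f=f$ on $\underline{K}{\mathfrak k}_{\scriptscriptstyle \rm L}$ is the polar decomposition of $\overline{s}$. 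Since the generator of $u_{\scriptscriptstyle \rm AW}$ is $\varepsilon\oplus(-\varepsilon)$, zero fails to be an eigenvalue of $\underline{\varepsilon}$ exactly because it fails to be one for $\varepsilon$.

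For uniqueness, let $(\underline{K}_i,{\mathfrak h}_i,\{\underline{\delta}_i^{-it/\beta}\},\underline{j}_i)$, $i=1,2$, be two double $\beta$-KMS structures for the given double classical dynamical system, both with no zero eigenvalue of $\underline{\varepsilon}_i$. The key observation is that the restriction $\bigl(\underline{K}_{i\upharpoonright {\mathfrak k}_{\scriptscriptstyle \rm L}},{\mathfrak h}_i,\{\underline{\delta}_i^{-it/\beta}\}\bigr)$ is a single $\beta$-KMS one-particle structure for $\bigl({\mathfrak k}_{\scriptscriptstyle \rm L},\underline{\sigma}_{\upharpoonright {\mathfrak k}_{\scriptscriptstyle \rm L}},\{\underline{T}_{t\upharpoonright {\mathfrak k}_{\scriptscriptstyle \rm L}}\}\bigr)$: density and the intertwining property are axioms $ii.)$ and $iii.)$ of Definition~\ref{dbops}, while the one-particle $\beta$-KMS condition \eqref{o-p-kms-condition} follows by combining $\underline{j}_i\,\underline{\delta}_i^{1/2}=\mathbb{1}$ on $\underline{K}_i{\mathfrak k}_{\scriptscriptstyle \rm L}$ with the analyticity in the strip $0<\Im t<\beta$ granted by $\underline{K}_i{\mathfrak k}_{\scriptscriptstyle \rm L}+i\underline{K}_i{\mathfrak k}_{\scriptscriptstyle \rm L}\subset{\mathscr D}(\underline{\delta}_i^{1/2})$. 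By Proposition~\ref{Kay Th} there is a unitary $w\colon{\mathfrak h}_1\to{\mathfrak h}_2$ with $w\circ\underline{K}_{1\upharpoonright {\mathfrak k}_{\scriptscriptstyle \rm L}}=\underline{K}_{2\upharpoonright {\mathfrak k}_{\scriptscriptstyle \rm L}}$ and $w\,\underline{\delta}_1^{it}w^{-1}=\underline{\delta}_2^{it}$. To see that $w$ implements the full equivalence, note that $\underline{j}_i$ is determined by the pair $(\underline{K}_i{\mathfrak k}_{\scriptscriptstyle \rm L},\underline{\delta}_i)$: the closed operator $\overline{s}_i$ on ${\mathfrak h}_i$ with domain $\underline{K}_i{\mathfrak k}_{\scriptscriptstyle \rm L}+i\underline{K}_i{\mathfrak k}_{\scriptscriptstyle \rm L}$ acting by $f+ig\mapsto f-ig$ has polar decomposition $\overline{s}_i=\underline{j}_i\,\underline{\delta}_i^{1/2}$; hence $w$, intertwining the real subspaces and the modular groups, intertwines $\overline{s}_i$, therefore $\underline{\delta}_i^{1/2}$, therefore $\underline{j}_i$. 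Finally $\underline{j}_i\circ\underline{K}_i=\underline{K}_i\circ\underline{\imath}$ gives $\underline{K}_i{\mathfrak k}_{\scriptscriptstyle \rm R}=\underline{K}_i\underline{\imath}{\mathfrak k}_{\scriptscriptstyle \rm L}=\underline{j}_i\,\underline{K}_i{\mathfrak k}_{\scriptscriptstyle \rm L}$, so $w\circ\underline{K}_{1\upharpoonright {\mathfrak k}_{\scriptscriptstyle \rm R}}=\underline{K}_{2\upharpoonright {\mathfrak k}_{\scriptscriptstyle \rm R}}$ as well.

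The main obstacle I anticipate is the reduction step: checking carefully that the abstract axioms of Definition~\ref{dbops} really force the one-particle $\beta$-KMS condition \eqref{o-p-kms-condition} on the restricted data --- this requires extracting the analyticity and the boundary identity from $\underline{j}$, $\underline{\delta}^{1/2}$ and the domain condition --- and, on the existence side, verifying that the Araki--Woods data genuinely satisfy $i.)$--$iv.)$ of Definition~\ref{dbops}, in particular the domain inclusion $\underline{K}{\mathfrak k}_{\scriptscriptstyle \rm L}+i\underline{K}{\mathfrak k}_{\scriptscriptstyle \rm L}\subset{\mathscr D}(\underline{\delta}^{1/2})$, which is exactly where $K{\mathfrak k}_{\scriptscriptstyle \rm L}\subset{\mathscr D}(\varepsilon^{-1/2})$ is used. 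The bookkeeping relating the spectra of $\varepsilon$ and $\underline{\varepsilon}$ under the normalisation $\underline{\delta}^{it}={\rm e}^{-it\beta\underline{\varepsilon}}$ is routine but must be kept explicit so that the ``no zero eigenvalue'' hypothesis is transferred correctly in both directions.
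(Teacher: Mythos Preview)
The paper does not prove this theorem; it is stated in the appendix with attribution to Kay \cite{Kay1}, Theorem~2, and no argument is given. So there is no ``paper's own proof'' to compare against.

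Your proposal is a faithful reconstruction of Kay's strategy and is essentially correct. A few small points are worth tightening. In the existence step you invoke a positive-energy one-particle structure for ${\mathfrak k}_{\scriptscriptstyle \rm L}$ with $K{\mathfrak k}_{\scriptscriptstyle \rm L}\subset{\mathscr D}(\varepsilon^{-1/2})$; in the abstract setting this is not an extra hypothesis but follows from Proposition~\ref{Kay Th} (existence of the positive-energy structure) together with the observation that the single $\beta$-KMS structure on ${\mathfrak k}_{\scriptscriptstyle \rm L}$, which also exists by Proposition~\ref{Kay Th}, is unitarily equivalent to the Araki--Woods one --- so the domain condition is automatically met. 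In the uniqueness step, the one place that deserves care is the claim that the restriction to ${\mathfrak k}_{\scriptscriptstyle \rm L}$ yields a single $\beta$-KMS structure: the analyticity and the boundary identity \eqref{o-p-kms-condition} do follow from axioms $iii.)$ and $iv.)$ of Definition~\ref{dbops}, but you should make explicit that $\underline{j}_i$ being anti-unitary together with $\underline{j}_i\,\underline{\delta}_i^{1/2}f=f$ on $\underline{K}_i{\mathfrak k}_{\scriptscriptstyle \rm L}$ gives $\langle\underline{K}_i{\mathfrak f},\underline{\delta}_i^{-it/\beta+1}\underline{K}_i{\mathfrak g}\rangle=\langle\underline{\delta}_i^{-it/\beta}\underline{K}_i{\mathfrak g},\underline{K}_i{\mathfrak f}\rangle$ rather than leaving it as an assertion. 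Otherwise the modular-theory argument that $w$ intertwines $\underline{j}_1,\underline{j}_2$ via the polar decomposition of $\overline{s}_i$ is clean and is exactly the right mechanism.
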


\chapter{Fock space}
\label{Fockspace}

Consider\footnote{We follow \cite[Vol.~II]{BR}.} a  Hilbert space ${\mathfrak h} $ with 
scalar product $\langle\, . \, , \, . \, \rangle$.
Let $ \Gamma^{(n)} ( {\mathfrak h} ) $, $n \in \mathbb{N}$, 
be the n-fold totally symmetric tensor product $\otimes_s$ of~${\mathfrak h} $ with itself. The elements of $ \Gamma^{(n)} ( {\mathfrak h} ) $
are of the form
	\[
		P_+ ( f_1 \otimes \ldots \otimes f_n) 
		\doteq\frac{1}{n!} \sum_\pi  f_{\pi_1} \otimes f_{\pi_2} \otimes \ldots \otimes  f_{\pi_n}\; , 
		\qquad f_1, \ldots, f_n \in {\mathfrak h} \; .
	\]
The sum is over all 
permutations $\pi \colon (1, 2, \ldots, n) \mapsto (\pi_1 , \pi_2 , \ldots , \pi_n)$. 
In other words, the symmetrisation operator $P_+$  takes care of the necessary symmetrisation required. 

\begin{definition}
The symmetric Fock-space 
$\Gamma ( {\mathfrak h} ) $ over ${\mathfrak h} $ is  the 
direct sum of the $n$-particle spaces:
	\[
		\Gamma ( {\mathfrak h} )  \doteq\oplus_{n= 0}^{\infty}  \Gamma^{(n)} ( {\mathfrak h} ) \;  , 
	\]
with $ \Gamma^{(0)} ( {\mathfrak h} ) \doteq \mathbb{C}$.  
\end{definition}

The vectors with finitely many components unequal to zero form a dense subspace  
	\[
		\Gamma^\circ ({\mathfrak h} ) \doteq {\rm Span} \, \Bigl\{ \oplus_{n= 0}^{N} \Gamma^{(n)} 
		( {\mathfrak h} ) \mid  N \in \mathbb{N} \Bigr\}  
	\]
in~$\Gamma ( {\mathfrak h} )$. The vector $\Omega \doteq (1,0, 0, \ldots)$ is called the Fock vacuum vector. 

For $f\in  {\mathfrak h} $ define the {\it creation operator} ${ a}^* ({f}) \colon \Gamma^\circ ( {\mathfrak h} )
\to \Gamma^\circ ( {\mathfrak h} )$ by  
	\[
		{ a}^* ({f}) \Psi{(n)} \doteq   \sqrt {n+1} \, \, 
		f \otimes_s  \Psi{(n)} \;  , \qquad \Psi{(n)} \in \Gamma^{(n)} ( {\mathfrak h} ) \; .
	\]
The operator ${ a} ({f}) $ denotes the adjoint of  ${ a}^* ({f})$, and is called the annihilation operator. 
Both ${ a}({f})$ and ${ a}^* ({f})$ are defined on  $\Gamma^\circ ( {\mathfrak h} )$ and can be extended 
to densely defined closed, unbounded operators on 
$\Gamma ( {\mathfrak h} )$. 
 
The map ${f} \mapsto { a}^* ({f})$ is linear, while
the map ${f}  \mapsto { a} ({f})$ is anti-linear. They satisfy the canonical commutation relations:
	\[
		\bigl[ { a} ({f}) , { a} ({g})  \bigr] = \bigl[ { a}^* ({f}) , { a}^* ({g}) \bigr] = 0 
	\]
and
	\[
		\bigl[ { a} ({f}) , { a}^* ({g})  \bigr] 
 		= \langle f    ,  g \rangle  \qquad \forall  {f}, {g}  \in   {\mathfrak h}  \; . 
 	\]
By applying the creation operators ${ a}^* ({f})$ to $\Omega$ we 
get $\Gamma^\circ ({\mathfrak h} )$ and by closure
all of~$\Gamma ({\mathfrak h} )$:
	\[
		{ a}^* ({f}_1) \ldots { a}^* ({f}_n) \Omega 
		= \sqrt {n!} \, \Bigl(  {f}_1    \otimes_s \ldots \otimes_s {f}_n \Bigr) 
		\in \Gamma^{(n)} ( {\mathfrak h} )  \; , \quad {f}_1 , \ldots, {f}_n \in  {\mathfrak h} \; .
	\]
The symmetric operator ${ a}^*({f}) + { a}({f}) $ is essentially self-adjoint 
on~$\Gamma^\circ ( {\mathfrak h} )$, its closure is denoted by 
	\[ 
		\Phi (f) \doteq \frac{1}{ \sqrt{2}}\bigl ( { a}^*({f}) + { a}({f}) \bigr)^- \;  . 
	\]
The field operators $ \Phi (f)$ satisfy, in the sense of quadratic forms on ${\mathscr D}(\Phi (f))\cap {\mathscr D}(\Phi (g))$, 
the commutation relations 
	\[
		[\Phi(f),\Phi (g)]= i \,  \Im \langle f , g \rangle \; ,\qquad f , \:g \in {\mathfrak h} \; .
	\]
The Weyl operators ${ W}_F ({f}) \doteq {\rm e}^{ i  \Phi (f) }$ satisfy 
	\[ 
		{  W}_F ({f}) {  W}_F ({g} )  
		=  { \rm e}^{   i \Im \langle g ,   f \rangle } {  W}_F ( {f}  + {g} )  \;  ,\qquad f , \:g \in {\mathfrak h} \;  .
	\]
${  W}_F ( {f})$ is related to  the exponentials ${\rm e}^{i { a}^*  ( f ) } $ and ${\rm e}^{i { a} ( f ) } $ by
	\[
		 {  W}_F (f) = {\rm e}^{i  {   a}^*(f)}\; {\rm e}^{i  {  a}(f)} {\rm e}^{-\frac12 \|f \|^2}     
	\]
on $\Gamma^\circ ( {\mathfrak h} )$. Thus 
	\[
		(\Omega, {W}_F (f) \Omega) = {\rm e}^{-\frac12 \|f \|^2}\qquad  \forall f \in {\mathfrak h} \;  .
	\]
Moreover, the Weyl operators $\{ {W}_F (f) \mid f \in {\mathfrak h} (dS) \}$ 
generate all of $ {\mathcal B} \bigl(\Gamma ({\mathfrak h} )\bigr)$.

Given a selfadjoint operator $H$ acting on the one-particle space ${\mathfrak h} $, one can define operators $H_n$ acting
on the $n$-particle space $\Gamma^{(n)} ({\mathfrak h})$ by setting $H_0 \doteq 0$ and
	\[
		H_n \bigl( P_+ ( f_1 \otimes \ldots \otimes f_n) \bigr) \doteq
		P_+ \Bigl( \sum_i f_1 \otimes f_2 \otimes \ldots \otimes H f_i 
		\otimes \ldots \otimes f_n \Bigr) 
	\]
for all $f_i \in {\mathscr D} (H) \subset {\mathfrak h} $.  The operator $H_n $ extends to a 
selfadjoint operator on~$\Gamma^{(n)} ( {\mathfrak h})$. 
The direct sum of all $H_n$ is symmetric and therefore closable, and essentially selfadjoint, because there exists 
a dense set of analytic vectors, which is formed by the finite sums of symmetrised products of analytic 
vectors of $H$. The selfadjoint closure of the direct sum $\oplus_{n \in \mathbb{N}_\circ} H_n$ of  $H_n$ 
is called the second quantisation of~$H$.  It is denoted  by 
	\[
		{\rm d}\Gamma (H) \doteq \overline { \oplus_{n \in \mathbb{N}_\circ} H_n } \; . 
	\]

If $U$ is a {\em unitary operator} acting on ${\mathfrak h}$, then  $U_n$ 
acting on  $\Gamma^{(n)} ({\mathfrak h})$ is defined by 
$U_0 \doteq \mathbb{1} $ and
	\[
		U_n \bigl( P_+ ( f_1 \otimes \ldots \otimes f_n) \bigr) 
		\doteq
		P_+ \Bigl( U f_1 \otimes U f_2 \otimes \ldots  \otimes U f_n \Bigr) 
	\]
and continuous extension. The second quantisation of $U$ is  
	\[
		\Gamma (U) \doteq \oplus_{n \in \mathbb{N}_\circ} U_n \;  . 
	\]
$\Gamma (U)$ is a unitary operator acting on $\Gamma ({\mathfrak h})$. 

\begin{lemma}
If $t \mapsto U_t  = {\rm e}^{i t  H } $ is a strongly continuous group of unitary operators on ${\mathfrak h}$, 
then $\Gamma (U_t) = {\rm e}^{i t {\rm d}\Gamma (H)} $ holds on $\Gamma({\mathfrak h})$. 
\end{lemma}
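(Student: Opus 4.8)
<br>

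\textbf{Proof plan.} The statement asserts that second quantization intertwines the functional calculus for one-parameter unitary groups: if $t\mapsto U_t={\rm e}^{itH}$ is strongly continuous on ${\mathfrak h}$, then $\Gamma(U_t)={\rm e}^{it\,{\rm d}\Gamma(H)}$ on $\Gamma({\mathfrak h})$. The plan is to prove this level by level, using the fact that both $\Gamma(\cdot)$ and ${\rm d}\Gamma(\cdot)$ are defined as (closures of) direct sums of the corresponding $n$-particle operators. First I would fix $n$ and verify the identity on $\Gamma^{(n)}({\mathfrak h})$; then I would assemble the pieces over $n$ and argue that the direct sums agree on a common core, so that their closures coincide.

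\textbf{Step 1: the $n$-particle level.} Fix $n\in\mathbb{N}_\circ$. On $\Gamma^{(0)}({\mathfrak h})=\mathbb{C}$ both sides are the identity, so assume $n\ge1$. Let $f_1,\dots,f_n$ be analytic vectors for $H$ (these form a dense set in ${\mathfrak h}$, and finite symmetrized products of them form a dense set of analytic vectors for ${\rm d}\Gamma(H)$, as recalled in the excerpt). For such vectors, $U_t^{\otimes n}$ acts as ${\rm e}^{itH}f_1\otimes\cdots\otimes{\rm e}^{itH}f_n$, and differentiating at $t=0$ gives $\sum_i f_1\otimes\cdots\otimes Hf_i\otimes\cdots\otimes f_n$, i.e.\ $H_n$ applied to the symmetrized product. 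More precisely, one checks that $t\mapsto U_n\bigl(P_+(f_1\otimes\cdots\otimes f_n)\bigr)$ is a norm-differentiable ${\mathfrak h}^{(n)}$-valued function whose derivative is $iH_n$ applied to the same vector, and that $U_n$ is a strongly continuous unitary group on $\Gamma^{(n)}({\mathfrak h})$. By Stone's theorem its self-adjoint generator is the closure of $H_n$ restricted to this dense invariant domain of analytic vectors, which by essential self-adjointness of $H_n$ (recalled in the excerpt) is $H_n$ itself. Hence $U_n={\rm e}^{itH_n}$ on $\Gamma^{(n)}({\mathfrak h})$ for every $n$.

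\textbf{Step 2: assembling the direct sums.} Both $\Gamma(U_t)=\oplus_n U_n$ and ${\rm e}^{it\,{\rm d}\Gamma(H)}$ are unitary on $\Gamma({\mathfrak h})$; by Step 1 they agree on each $\Gamma^{(n)}({\mathfrak h})$, hence on the algebraic direct sum $\Gamma^\circ({\mathfrak h})$. Since ${\rm d}\Gamma(H)$ is the self-adjoint closure of $\oplus_n H_n$, the group ${\rm e}^{it\,{\rm d}\Gamma(H)}$ leaves invariant, and is strongly continuous on, the common dense domain $\Gamma^\circ({\mathfrak h})$; the same is true of $\Gamma(U_t)$ by construction. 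Two strongly continuous unitary groups that coincide on a dense subspace coincide everywhere, so $\Gamma(U_t)={\rm e}^{it\,{\rm d}\Gamma(H)}$ on all of $\Gamma({\mathfrak h})$, as claimed. Strong continuity of $t\mapsto\Gamma(U_t)$ follows from strong continuity of each $U_n$ together with a uniform boundedness ($\|\Gamma(U_t)\|=1$) plus density argument.

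\textbf{Main obstacle.} The only genuinely delicate point is Step 1: identifying the Stone generator of $t\mapsto U_n$ with $H_n$ rather than merely with some self-adjoint extension. This requires producing a core for $U_n$ consisting of vectors on which both the group and $H_n$ are well-behaved — the finite symmetrized products of analytic vectors of $H$ — and invoking the essential self-adjointness statement already recorded in the excerpt. Everything else (strong continuity, passage to the direct sum, the dense-subspace uniqueness argument) is routine. I would phrase the proof so that the reader can simply cite the standard fact that a strongly continuous unitary group is determined by its action on a core for its generator, and that $\oplus_n H_n$ restricted to $\Gamma^\circ({\mathfrak h})$ is essentially self-adjoint.
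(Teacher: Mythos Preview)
Your proposal is correct and follows the standard route via Stone's theorem applied level by level on the $n$-particle sectors, using the dense set of analytic vectors that the excerpt itself singles out. The paper states this lemma without proof, treating it as a well-known fact about second quantization; your argument supplies precisely the details one would expect, and the only subtle point you identify---that the Stone generator of $U_n$ is $H_n$ itself rather than a proper extension---is handled correctly by the essential self-adjointness on symmetrized products of analytic vectors already recorded in the text.
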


For the convenience of the reader we also recall the following results. 

\begin{theorem}[Araki \cite{A1}, Theorem 1] 
\label{araki}
Let ${\mathfrak h}$ be a Hilbert space and let~${\mathfrak d}$ be a real subspace of ${\mathfrak h}$. Let 
${\mathfrak W}({\mathfrak d})\subset {\mathfrak W}({\mathfrak h})$ denote the abstract 
Weyl algebra\footnote{See, \emph{e.g.}, Eq.~\eqref{weylalgebra} with the symplectic form $\sigma$ 
given by twice the imaginary part of the scalar product in the Hilbert space ${\mathfrak h}$.}
generated by $\{W(h) \mid h\in {\mathfrak d}\}$  and let $\pi \colon W(h) \to W_F(h) \in 
{\mathcal B}(\Gamma({\mathfrak h}))$ be the Fock representation.    Then
	\begin{align}
	\bigcap_{\alpha} \pi  \bigl( {\mathfrak W}({\mathfrak d}_{\alpha}) \bigr)''
			& = \pi  \bigl({\mathfrak W}\left(\cap_{\alpha} {\mathfrak d}_{\alpha} \right) \bigr)''  \; ,
			\nonumber \\
		\label{et.2n}
		\bigvee_{\alpha}\pi  \bigl({\mathfrak W}({\mathfrak d}_{\alpha})\bigr)''
		& = \pi \bigl({\mathfrak W}\left(\vee_{\alpha} {\mathfrak d}_{\alpha} \right) \bigr)'' \; , 
	\end{align}
and
	\begin{align}
		\label{et.1}
		\pi  \bigl({\mathfrak W}({\mathfrak d})\bigr)'= \pi  \bigl({\mathfrak W}({\mathfrak d}^{\perp})\bigr)'',
	\end{align}
where ${\mathfrak d}_{\alpha}$ is a family of real subspaces of ${\mathfrak h}$ and ${\mathfrak d}^{\perp}$ is the vector space 
orthogonal to~${\mathfrak d}$ for the symplectic form $\sigma(h_{1}, h_{2})= 2 \Im \langle h_{1}, h_{2} \rangle_{\mathfrak h}$.
\end{theorem}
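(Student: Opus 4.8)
The plan is to establish the duality relation~\eqref{et.1} first; the two lattice identities in~\eqref{et.2n} will then follow from it by elementary manipulations with symplectic complements. Throughout I write $\sigma(h,k)=2\Im\langle h,k\rangle$ and, for a real subspace ${\mathfrak e}\subset{\mathfrak h}$, denote by ${\mathfrak e}'\doteq\{k\in{\mathfrak h}\mid\sigma(k,h)=0\ \forall h\in{\mathfrak e}\}$ its symplectic complement. Two preliminary reductions are in order. First, since $h\mapsto W_F(h)$ is strongly continuous and products of Weyl operators reproduce the real algebraic span up to phases (see~\eqref{weylalgebra}), the von Neumann algebra $\pi({\mathfrak W}({\mathfrak e}))''$ depends only on the closed real span of ${\mathfrak e}$; hence I may and do assume all subspaces in sight are closed. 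Second, the map ${\mathfrak e}\mapsto{\mathfrak e}'$ on closed real subspaces is an order–reversing involution with $({\mathfrak e}')'={\mathfrak e}$ and $\bigl(\bigvee_\alpha{\mathfrak e}_\alpha\bigr)'=\bigcap_\alpha{\mathfrak e}_\alpha'$, standard consequences of the real bipolar theorem applied after noting that ${\mathfrak e}'$ is the $\Re\langle\cdot,\cdot\rangle$–orthogonal complement of $i{\mathfrak e}$. With this, the $\vee$–identity in~\eqref{et.2n} is immediate: the von Neumann algebra generated by $\{W_F(h)\mid h\in\bigcup_\alpha{\mathfrak d}_\alpha\}$ contains $W_F(h)$ for every $h$ in the closed real span $\bigvee_\alpha{\mathfrak d}_\alpha$, by the Weyl relations and strong continuity, and the reverse inclusion is trivial.

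The core is therefore the duality $\pi({\mathfrak W}({\mathfrak d}))'=\pi({\mathfrak W}({\mathfrak d}'))''$. The inclusion ``$\supseteq$'' is locality: if $h\in{\mathfrak d}$ and $k\in{\mathfrak d}'$ then $\sigma(h,k)=0$, so $W_F(h)$ and $W_F(k)$ commute, whence $\pi({\mathfrak W}({\mathfrak d}'))''\subseteq\pi({\mathfrak W}({\mathfrak d}))'$. For ``$\subseteq$'' I would argue by modular theory, after a reduction to the standard case. Decompose ${\mathfrak h}={\mathfrak h}_s\oplus{\mathfrak k}_1\oplus{\mathfrak k}_2$ orthogonally, where ${\mathfrak k}_1={\mathfrak d}\cap i{\mathfrak d}$ and ${\mathfrak k}_2=\bigl(\overline{{\mathfrak d}+i{\mathfrak d}}\bigr)^{\perp}$ are complex subspaces and ${\mathfrak h}_s$ is the remaining summand; then ${\mathfrak d}={\mathfrak d}_s\oplus{\mathfrak k}_1$ with ${\mathfrak d}_s\doteq{\mathfrak d}\cap{\mathfrak h}_s$ a \emph{standard} real subspace of ${\mathfrak h}_s$ (i.e.\ ${\mathfrak d}_s+i{\mathfrak d}_s$ dense in ${\mathfrak h}_s$ and ${\mathfrak d}_s\cap i{\mathfrak d}_s=\{0\}$), while ${\mathfrak d}'={\mathfrak d}_s'\oplus{\mathfrak k}_2$ in the same splitting. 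Under the factorisation $\Gamma({\mathfrak h})=\Gamma({\mathfrak h}_s)\otimes\Gamma({\mathfrak k}_1)\otimes\Gamma({\mathfrak k}_2)$ one gets $\pi({\mathfrak W}({\mathfrak d}))''={\mathcal M}_s\,\overline{\otimes}\,{\mathcal B}(\Gamma({\mathfrak k}_1))\,\overline{\otimes}\,{\mathbb C}\mathbb{1}$ with ${\mathcal M}_s\doteq\pi_s({\mathfrak W}({\mathfrak d}_s))''$, and $\pi({\mathfrak W}({\mathfrak d}'))''=\pi_s({\mathfrak W}({\mathfrak d}_s'))''\,\overline{\otimes}\,{\mathbb C}\mathbb{1}\,\overline{\otimes}\,{\mathcal B}(\Gamma({\mathfrak k}_2))$; since on the complex summands ${\mathfrak k}_j$ the relation $\pi({\mathfrak W}({\mathfrak k}))'=\pi({\mathfrak W}({\mathfrak k}^{\perp_{\mathbb C}}))''$ is the elementary tensor–product fact for Fock space (Appendix~\ref{Fockspace}), the claim reduces to ${\mathcal M}_s'=\pi_s({\mathfrak W}({\mathfrak d}_s'))''$ in the standard case.

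In the standard case $\Omega$ is cyclic and separating for ${\mathcal M}_s$ (cyclicity is the one–particle Reeh–Schlieder property, Theorem~\ref{oprs}; separation follows because $\Omega$ is then cyclic for $\pi_s({\mathfrak W}({\mathfrak d}_s'))''\subseteq{\mathcal M}_s'$, as ${\mathfrak d}_s'$ is standard too). Form the one–particle Tomita operator $s_{{\mathfrak d}_s}\colon f+ig\mapsto f-ig$ ($f,g\in{\mathfrak d}_s$), a densely defined closable antilinear operator on ${\mathfrak h}_s$ with polar decomposition $\overline{s_{{\mathfrak d}_s}}=j_{{\mathfrak d}_s}\,\delta_{{\mathfrak d}_s}^{1/2}$ as in Theorem~\ref{one-p-BW}, where $j_{{\mathfrak d}_s}$ is an antiunitary involution, $\delta_{{\mathfrak d}_s}>0$, $\delta_{{\mathfrak d}_s}^{it}{\mathfrak d}_s={\mathfrak d}_s$, and $j_{{\mathfrak d}_s}{\mathfrak d}_s={\mathfrak d}_s'$ (the last because $s_{{\mathfrak d}_s}^{*}=s_{{\mathfrak d}_s'}$). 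Using $\Gamma(\delta_{{\mathfrak d}_s}^{it})W_F(h)\Gamma(\delta_{{\mathfrak d}_s}^{-it})=W_F(\delta_{{\mathfrak d}_s}^{it}h)$ and $\Gamma(j_{{\mathfrak d}_s})W_F(h)\Gamma(j_{{\mathfrak d}_s})=W_F(j_{{\mathfrak d}_s}h)^{*}$, one checks on the total set $W_F(h_1)\cdots W_F(h_n)\Omega$ ($h_i\in{\mathfrak d}_s$) that the closure of $A\Omega\mapsto A^{*}\Omega$, $A\in{\mathcal M}_s$, equals $\Gamma(j_{{\mathfrak d}_s})\,\Gamma(\delta_{{\mathfrak d}_s})^{1/2}$, so $\Gamma(j_{{\mathfrak d}_s})$ is the modular conjugation of $({\mathcal M}_s,\Omega)$; Tomita–Takesaki then gives ${\mathcal M}_s'=\Gamma(j_{{\mathfrak d}_s}){\mathcal M}_s\Gamma(j_{{\mathfrak d}_s})=\pi_s\bigl({\mathfrak W}(j_{{\mathfrak d}_s}{\mathfrak d}_s)\bigr)''=\pi_s\bigl({\mathfrak W}({\mathfrak d}_s')\bigr)''$, the missing inclusion. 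Finally, the $\cap$–identity in~\eqref{et.2n} follows from~\eqref{et.1}, the $\vee$–identity and the complement identities by the chain
\[
\bigcap_\alpha\pi({\mathfrak W}({\mathfrak d}_\alpha))''
=\bigcap_\alpha\pi({\mathfrak W}({\mathfrak d}_\alpha'))'
=\Bigl(\bigvee_\alpha\pi({\mathfrak W}({\mathfrak d}_\alpha'))\Bigr)'
=\pi\bigl({\mathfrak W}(\textstyle\bigvee_\alpha{\mathfrak d}_\alpha')\bigr)'
=\pi\bigl({\mathfrak W}((\textstyle\bigvee_\alpha{\mathfrak d}_\alpha')')\bigr)''
=\pi\bigl({\mathfrak W}(\textstyle\bigcap_\alpha{\mathfrak d}_\alpha)\bigr)''.
\]

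The hard part will be the identification $\overline{S_{{\mathcal M}_s}}=\Gamma(j_{{\mathfrak d}_s})\Gamma(\delta_{{\mathfrak d}_s})^{1/2}$ together with the one–particle statement $j_{{\mathfrak d}_s}{\mathfrak d}_s={\mathfrak d}_s'$: one must verify that second quantisation is compatible with polar decomposition in the relevant sense (domains of the second quantisation of an unbounded closable antilinear operator) and that the conjugation produced this way genuinely interchanges ${\mathfrak d}_s$ with its symplectic complement. This is exactly where Araki's original argument does its work, and is the step I would treat with the most care; the reduction to the standard case and the order–theoretic derivation of~\eqref{et.2n} are routine by comparison.
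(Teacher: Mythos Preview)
The paper does not prove this theorem: it is quoted from Araki~\cite{A1} in the Fock--space appendix without argument, and the subsequent Leyland--Roberts--Testard theorem~\cite{LRT} (also stated without proof) records essentially the same circle of facts. So there is no ``paper's own proof'' to compare with directly.

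Your argument is the modern one and is essentially correct: reduce to a standard real subspace, build the one--particle Tomita operator $s_{\mathfrak d_s}=j_{\mathfrak d_s}\delta_{\mathfrak d_s}^{1/2}$, second--quantise to identify the modular objects of $(\mathcal M_s,\Omega)$, and read off duality from $j_{\mathfrak d_s}\mathfrak d_s=\mathfrak d_s'$; the lattice identities then follow formally. This is the Eckmann--Osterwalder route~\cite{EO}, later streamlined by~\cite{LRT}. It is worth noting, however, that Araki's original 1963 proof \emph{predates} Tomita--Takesaki theory entirely: he works directly with coherent (exponential) vectors in Fock space and computes commutants by hand, so your proof is genuinely different in method from the result you are asked to establish. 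Your approach is cleaner conceptually and explains \emph{why} the duality holds (it is modular conjugation), at the cost of importing the full Tomita--Takesaki machinery; Araki's is more elementary but opaque.

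Two small points. First, your citations of Theorem~\ref{oprs} and Theorem~\ref{one-p-BW} are misplaced: those are de~Sitter--specific results, whereas what you actually need are the general, elementary facts that $\mathfrak d_s+i\mathfrak d_s$ dense $\Rightarrow$ $\Omega$ cyclic for $\mathcal M_s$, and the abstract polar decomposition of the one--particle Tomita operator of a standard subspace (no geometry required). Second, you correctly flag the identification $\overline{S_{\mathcal M_s}}=\Gamma(j_{\mathfrak d_s})\Gamma(\delta_{\mathfrak d_s})^{1/2}$ as the delicate step; the key technical input is that second quantisation respects polar decomposition of closed antilinear operators, which is exactly what~\cite{EO} and~\cite{LRT} supply.
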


\begin{theorem}[Leyland, Roberts and Testard \cite{LRT}, Theorem I.3.2]
Let ${\mathfrak h}$ be a Hilbert space and let~${\mathfrak d}$ be a real subspace of ${\mathfrak h}$. Let 
$\overline{\mathfrak d}$ denote the closure of ${\mathfrak d}$. Furthermore, let
${\mathfrak W}({\mathfrak d})\subset {\mathfrak W}({\mathfrak h})$ denote the abstract Weyl algebra
generated by $\{W(h) \mid h\in {\mathfrak d}\}$  and let $\pi \colon W(h) \to W_F(h) \in 
{\mathcal B}(\Gamma({\mathfrak h}))$ be the Fock representation.  It follows that
\begin{itemize}
\item[$ i.)$] $\pi \bigl( {\mathfrak W}({\mathfrak d}) \bigr)"= \pi \bigl( {\mathfrak W}(\overline{\mathfrak d}) \bigr)$;
\item[$ ii.)$] $\Omega$ is cyclic for $\pi \bigl({\mathfrak W}({\mathfrak d}) \bigr)"$ if and only if ${\mathfrak d} + i {\mathfrak d}$ is dense in 
${\mathfrak h}$;
\item[$ iii.)$] $\Omega$ is separating for $\pi \bigl({\mathfrak W}({\mathfrak d}) \bigr)"$ 
if and only if $\overline{\mathfrak d} \cap i \overline{\mathfrak d} = \{0 \}$;
\item[$ iv.)$] $\pi \bigl({\mathfrak W}({\mathfrak d}) \bigr)'' = \pi \bigl({\mathfrak W}({\mathfrak d}^{\perp})\bigr)'$;
\item[$ v.)$] $\pi \bigl({\mathfrak W}({\mathfrak d}_1) \bigr)" \vee \pi \bigl({\mathfrak W}({\mathfrak d}_2)\bigr)"= 
\pi \bigl({\mathfrak W}({\mathfrak d}_1 + {\mathfrak d}_2) \bigr)"$;
\item[$ vi.)$] $\pi \bigl({\mathfrak W}({\mathfrak d}_1)\bigr)" 
\cap \pi \bigl({\mathfrak W}({\mathfrak d}_2) \bigr)"= 
\pi \bigl({\mathfrak W}(\overline{{\mathfrak d}_1} \cap \overline{{\mathfrak d}_2})\bigr)"$ and 
consequently $\pi \bigl({\mathfrak W}({\mathfrak d})\bigr)" $ is a factor if and only if 
$\overline{\mathfrak d} \cap  {\mathfrak d}^\perp = \{ 0 \}$;
\end{itemize}
As in the previous theorem, ${\mathfrak d}^{\perp}$ denotes the symplectic complement of~${\mathfrak d}$ in ${\mathfrak h}$.
\end{theorem}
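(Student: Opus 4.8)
The plan is to build the whole statement from three ingredients: strong continuity of the Weyl map $h\mapsto W_F(h)$, the exponential–vector (coherent–state) description of the cyclic subspace of $\Omega$, and the duality $\pi\bigl(\mathfrak{W}(\mathfrak{d})\bigr)''=\pi\bigl(\mathfrak{W}(\mathfrak{d}^{\perp})\bigr)'$ — which is precisely Eq.~\eqref{et.1} of Theorem~\ref{araki} and which I would invoke rather than reprove, since its proof runs through Tomita–Takesaki modular theory for the standard–subspace Tomita operator $S\colon h+ik\mapsto h-ik$ and second quantization of its polar decomposition. Throughout write $\mathcal{M}_i\doteq\pi\bigl(\mathfrak{W}(\mathfrak{d}_i)\bigr)''$ and recall that $\mathfrak{d}^{\perp}$ is the complement for $\sigma(f,g)=2\Im\langle f,g\rangle_{\mathfrak{h}}$, which is automatically a \emph{closed} real subspace.

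\textbf{The continuity step $i.)$.} First I would record that $\mathfrak{h}\ni h\mapsto W_F(h)\in\mathcal{B}(\Gamma(\mathfrak{h}))$ is continuous from the norm topology to the strong operator topology: on the finite–particle domain $\Gamma^{\circ}(\mathfrak{h})$ one has $\bigl(W_F(f)-W_F(g)\bigr)\psi=W_F(f)\bigl(\mathbb{1}-{\rm e}^{ic(f,g)}W_F(g-f)\bigr)\psi$ with a phase $c(f,g)\to 0$, and this tends to $0$ as $g\to f$ because $\Phi(g-f)\psi\to 0$; unitarity of the $W_F$ propagates this to all of $\Gamma(\mathfrak{h})$. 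Hence every $W_F(h)$ with $h\in\overline{\mathfrak{d}}$ is a strong limit of elements $W_F(h_n)$, $h_n\in\mathfrak{d}$, and so lies in the weakly closed algebra $\mathcal{M}$; the reverse inclusion is trivial. This is the step that removes all closedness hypotheses in the sequel. Likewise $v.)$: the inclusion $\supseteq$ follows from $W_F(f+g)=W_F(f)W_F(g)$ up to a phase, so $W_F(\mathfrak{d}_1+\mathfrak{d}_2)\subset\mathcal{M}_1\vee\mathcal{M}_2$, and combined with $i.)$ this gives $\mathcal{M}_1\vee\mathcal{M}_2\supseteq\pi\bigl(\mathfrak{W}(\overline{\mathfrak{d}_1+\mathfrak{d}_2})\bigr)''$; the inclusion $\subseteq$ is immediate. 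And $iv.)$ is exactly \eqref{et.1}.

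\textbf{Cyclicity $ii.)$.} For a complex subspace $\mathfrak{k}\subseteq\mathfrak{h}$ the identity $W_F(f)\Omega={\rm e}^{-\|f\|^{2}/2}\,{\rm e}^{i\,a^{*}(f)}\Omega$ shows that $\{W_F(f)\Omega:f\in\mathfrak{k}\}$ has dense linear span in the Fock space $\Gamma(\overline{\mathfrak{k}})\subseteq\Gamma(\mathfrak{h})$ (distinct coherent vectors are linearly independent with closed span the full Fock space of the underlying one–particle space). Taking $\mathfrak{k}=\mathfrak{d}+i\mathfrak{d}$ and writing $W_F(h_1+ih_2)$ as $W_F(h_1)W_F(ih_2)$ up to a phase — so that these vectors lie in $\mathcal{M}\Omega$ — identifies $\overline{\mathcal{M}\Omega}$ with $\Gamma(\overline{\mathfrak{d}+i\mathfrak{d}})$, which equals $\Gamma(\mathfrak{h})$ exactly when $\mathfrak{d}+i\mathfrak{d}$ is dense.

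\textbf{Harvesting $iii.), vi.)$ and the factor criterion.} For $iii.)$, $\Omega$ is separating for $\mathcal{M}$ iff cyclic for $\mathcal{M}'\overset{iv.)}{=}\pi\bigl(\mathfrak{W}(\mathfrak{d}^{\perp})\bigr)''$, iff, by $ii.)$, $\mathfrak{d}^{\perp}+i\mathfrak{d}^{\perp}$ is dense; using $\sigma(f,g)=2\Re\langle if,g\rangle$, so that $\mathfrak{d}^{\perp}=i\,\mathfrak{d}^{\perp_{\mathbb{R}}}$ for the orthogonal complement $\perp_{\mathbb{R}}$ relative to $\Re\langle\cdot,\cdot\rangle$, one computes $\bigl(\mathfrak{d}^{\perp}+i\mathfrak{d}^{\perp}\bigr)^{\perp_{\mathbb{R}}}=\overline{\mathfrak{d}}\cap i\overline{\mathfrak{d}}$, whence density holds iff $\overline{\mathfrak{d}}\cap i\overline{\mathfrak{d}}=\{0\}$. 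For $vi.)$ I take commutants: $(\mathcal{M}_1\cap\mathcal{M}_2)'=\mathcal{M}_1'\vee\mathcal{M}_2'=\pi\bigl(\mathfrak{W}(\mathfrak{d}_1^{\perp})\bigr)''\vee\pi\bigl(\mathfrak{W}(\mathfrak{d}_2^{\perp})\bigr)''=\pi\bigl(\mathfrak{W}(\overline{\mathfrak{d}_1^{\perp}+\mathfrak{d}_2^{\perp}})\bigr)''$ by $iv.)$ and $v.)$; since $(\overline{\mathfrak{d}_1^{\perp}+\mathfrak{d}_2^{\perp}})^{\perp}=\mathfrak{d}_1^{\perp\perp}\cap\mathfrak{d}_2^{\perp\perp}=\overline{\mathfrak{d}_1}\cap\overline{\mathfrak{d}_2}$ (using $(\mathfrak{a}+\mathfrak{b})^{\perp}=\mathfrak{a}^{\perp}\cap\mathfrak{b}^{\perp}$ and $\mathfrak{e}^{\perp\perp}=\overline{\mathfrak{e}}$, the latter because $\sigma$ is non‑degenerate in the strong Riesz sense), applying $iv.)$ once more and taking the commutant back gives $\mathcal{M}_1\cap\mathcal{M}_2=\pi\bigl(\mathfrak{W}(\overline{\mathfrak{d}_1}\cap\overline{\mathfrak{d}_2})\bigr)''$. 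The factor statement is the case $\mathfrak{d}_2=\mathfrak{d}^{\perp}$: $\mathcal{M}\cap\mathcal{M}'=\pi\bigl(\mathfrak{W}(\overline{\mathfrak{d}}\cap\mathfrak{d}^{\perp})\bigr)''$, which is $\mathbb{C}\mathbb{1}$ iff $\overline{\mathfrak{d}}\cap\mathfrak{d}^{\perp}=\{0\}$, since a non‑zero $h$ in this intersection would produce the unbounded self‑adjoint — hence non‑scalar — generator $\Phi(h)$ of the unitary group $t\mapsto W_F(th)$ lying in the centre. I expect the only genuine obstacle to be $iv.)$, i.e.\ Araki's duality \eqref{et.1}; everything else is the continuity argument of $i.)$, the coherent‑vector computation of $ii.)$, and formal manipulations with commutants and symplectic complements.
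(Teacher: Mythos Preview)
The paper does not supply a proof of this theorem; it is quoted from \cite{LRT} in the Fock-space appendix as background. So there is nothing in the paper to compare against, and your sketch has to stand on its own.

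Your overall architecture is the standard one and is sound: strong continuity of $h\mapsto W_F(h)$ gives $i.)$ and the easy direction of $v.)$; Araki's duality \eqref{et.1} is precisely $iv.)$; and your derivations of $iii.)$ and $vi.)$ from $ii.)$, $iv.)$, $v.)$ via commutant and symplectic-complement calculus are correct (including the identification $\mathfrak d^{\perp}=i\,\mathfrak d^{\perp_{\mathbb R}}$ and the use of $\mathfrak e^{\perp\perp}=\overline{\mathfrak e}$).

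There is, however, a genuine gap in your argument for $ii.)$. You factor $W_F(h_1+ih_2)$ as a phase times $W_F(h_1)W_F(ih_2)$ and conclude that this vector lies in $\mathcal M\Omega$. But $W_F(ih_2)\notin\mathcal M=\pi(\mathfrak W(\mathfrak d))''$ when $ih_2\notin\overline{\mathfrak d}$ --- the algebra is generated by $W_F(h)$ with $h\in\mathfrak d$, not $h\in i\mathfrak d$ --- so the Weyl-relation factorisation does \emph{not} place $W_F(h_1+ih_2)\Omega$ in $\mathcal M\Omega$. The repair is the analyticity implicit in the coherent-vector formula, which you invoke but do not actually use: for $h\in\mathfrak d$ the exponential vector $z\mapsto\exp(izh)=\sum_n(izh)^{\otimes n}/\sqrt{n!}$ is a $\Gamma(\mathfrak h)$-valued entire function, and $W_F(th)\Omega=e^{-t^2\|h\|^2/2}\exp(ith)$ for real $t$; hence if $\psi\perp\mathcal M\Omega$ then the entire function $z\mapsto\langle\psi,\exp(izh)\rangle$ vanishes on $\mathbb R$ and therefore identically. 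The several-variable version (fix $h_1,\ldots,h_n\in\mathfrak d$ and analytically continue in $(t_1,\ldots,t_n)$) yields $\psi\perp\exp(f)$ for every $f$ in the complex span $\mathfrak d+i\mathfrak d$, which is what you need. Equivalently, differentiate $t\mapsto W_F(th)\Omega$ repeatedly at $t=0$ to get $a^*(h)^n\Omega\in\overline{\mathcal M\Omega}$, polarise to obtain all symmetric tensors over $\mathfrak d$, and observe that their complex linear span is dense in $\Gamma(\overline{\mathfrak d+i\mathfrak d})$. Either route closes the gap; the bare Weyl relation does not.
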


\chapter{Tuboids}

\label{deftub}
\setcounter{equation}{0}
A \emph{tuboid}\index{tuboid} for the de Sitter space is a subset of $dS_{\mathbb{C}}$ which  
is $i.)$~bordered by  real de Sitter space $dS$ and $ii.)$ allows boundary values on $dS$ of functions 
holomorphic in the tuboid to be controlled by methods of complex analysis (in several variables). We 
proceed in several steps, following closely \cite{BM}. 

\goodbreak
\begin{definition}
A \emph{profile}\index{profile}~${\mathcal P}$ is an open subset of the tangent bundle $TdS$ of the form 
	\[
		{\mathcal P} = \bigcup_{ x \in dS} (  x , {\mathcal P}_{ x}) \; , 
	\]
where each fibre\index{fibre} ${\mathcal P}_{ x}$ is a non-empty cone with apex at the origin in $T_{ x}dS$.
\end{definition}

\goodbreak
\begin{definition}[Bros and Moschella \cite{BM}, Def.~A.1]
A diffeomorphism $\Xi$ is an \emph{admissible local diffeomorphism}\index{admissible local diffeomorphism}  
at a point $ x_\circ \in dS$, if it maps a neighbourhood ${\mathcal N}_{\scriptscriptstyle TdS} ( x_\circ,  0 ) $ 
of $(  x_\circ ,  0 )$ in $TdS$ onto a corresponding neighbourhood 
	\[
		{\mathcal N}_{\mathbb{C}}( x_\circ)
			\doteq \Xi \left({\mathcal N}_{\scriptscriptstyle TdS} ( x_\circ,  0 )\right)
	\]
in $dS_{\mathbb{C}}$, considered as a 4-dimensional $C^\infty$-manifold, in such a way that the 
following properties hold: 
\begin{itemize}
\item[$i.)$] $\Xi ( x,  0 ) =  x \in {\mathcal N}_{\mathbb{C}}( x_\circ)$ if $( x,  0) 
\in {\mathcal N}_{\scriptscriptstyle TdS} ( x_\circ,  0 )$;
\item[$ii.)$] for all $( x ,  y) \in {\mathcal N}_{\scriptscriptstyle TdS} ( x_\circ, 0 )  $ with $ y \ne  0$ 
the differentiable function 
	\[
		t \mapsto f(t) \doteq \Xi ( x,  t  y ) \in dS_\mathbb{C}
	\]
is such that 
	\[
		\frac{1}{i} \left(\frac{{\rm d}f}{{\rm d}t} \right)_{\upharpoonright t= 0} = \alpha   y\;  , 
		\qquad \alpha > 0 \; . 
	\]
\end{itemize}
\end{definition}

\bigskip
In order to define admissible local diffeomorphisms, which respect the causal structure of $dS$, 
it is convenient to consider the \emph{projective representation} 
	$
		\dot T dS \doteq \bigcup_{ x \in dS} \left(  x, \dot T_{ x} dS \right) 
	$
of $T dS$, with 
	\[
		\dot T_{ x} dS \doteq   \bigl( T_{ x} dS\setminus \{  0 \} \bigr) / \mathbb{R}^+ \; . 
	\]
The image of each point $ y \in T_{ x} dS$ in ${\dot T}_{ x} dS$ is 
$\dot { y}= \{ \lambda  y \mid \lambda >0 \}$. The complement $\dot {\mathcal P}'$
of $\dot {\mathcal P} \doteq   \bigl( {{\mathcal P}}_{ x} \setminus \{  0 \} \bigr) / \mathbb{R}^+ $ 
in $\dot T dS$ is the open set 
	\[
		\dot {\mathcal P}^c \doteq \dot T dS \setminus \overline{\dot {\mathcal P}}  \;  .
	\]
This set equals $\bigcup_{ x \in dS} ( x, \dot {\mathcal P}_{ x}^c ) $, where 
$\dot {\mathcal P}_{ x}^c = \dot T_{ x} dS \setminus \overline{\dot {\mathcal P}_{ x}} $. 
Taking advantage of these notions, one can now define tuboids for the de Sitter space:

\begin{definition}[Bros and Moschella \cite{BM}]
A domain ${\mathcal T}$ (\emph{i.e.}, a connected open subset) in~$dS_{\mathbb{C}}$ is called a 
{\em tuboid} with profile ${\mathcal P}$ above $dS$ if it satisfies the following property: for 
every point $ x_\circ$ in $dS$, there exists an admissible local diffeomorphism $\Xi$ at 
$ x_\circ$ such that 
\begin{itemize}
\item[$i.)$] every point $( x_\circ, \dot { y}_1)$   in $\dot {\mathcal P}$ admits a compact 
neighbourhood ${\mathcal K}( x_\circ, \dot { y}_1)$ in~$\dot {\mathcal P}$ and, 
in the sequel, a sufficiently small neighborhood ${\mathcal N}_{\scriptscriptstyle TdS} ( x_\circ,  0 )$ 
of~$( x_\circ,  0 )$ in $TdS$ such that 
	\[
		\Xi \left(   \{ ( x,   y) \in {\mathcal N}_{\scriptscriptstyle TdS} ( x_\circ,  0) 
		\mid ( x, \dot  y) \in {\mathcal K}( x_\circ, \dot  {y}_1)   \, \} \right) \subset {\mathcal T} \; ;
	\]
\item[$ii.)$] every point $( x_\circ, \dot { y}_2)$ in ${\dot {\mathcal P}}^c$ admits a compact 
neighbourhood ${\mathcal K}^{c}( x_\circ, \dot { y}_2) $ in~$\dot {{\mathcal P}}^c$ and, 
in the sequel, a sufficiently small neighborhood ${\mathcal N}_{\scriptscriptstyle TdS}^c ( x_\circ,  0 )$ 
of~$( x_\circ,  0 )$ in $TdS$ such that 
	\[
		\Xi \left( \bigl\{ ( x, { y}) \in {\mathcal N}_{\scriptscriptstyle TdS}^{c} ( x_\circ,  0 )
		\mid ( x, \dot { y}) \in {\mathcal K}^{c}( x_\circ, \dot { y}_2), \; \bigr\} \right) 
		\cap {\mathcal T} = \emptyset \; .
	\]
\end{itemize}
\end{definition}

In $i.)$ and $ii.)$ ${\mathcal N}_{\scriptscriptstyle TdS}  ( x_\circ,  0 )$ and  
${\mathcal N}_{\scriptscriptstyle TdS}^{c} ( x_\circ,  0 )$  may depend on $\dot { y}_1$ and $\dot { y}_2$, respectively.

\chapter{Sobolev spaces on the circle and on the sphere}
\label{apC}
If $h \in L^2 (S^1, {\rm d} \psi)$, then $h$ has a \emph{Fourier series}\index{Fourier series} 
	\begin{equation}
	\label{sobolev}
	h (\psi) = \sum_{k\in {\mathbb Z}} a_k {\rm e}^{ik \psi} \; , 
	\qquad a_k = \frac{1}{2 \pi} \int_{S^1} {\rm d} \psi \; h (\psi) {\rm e}^{-ik  \psi} \;  . 
	\end{equation}
The infinite sum on the r.h.s.~ converges  in $L^2 (S^1, {\rm d} \psi)$. In fact,  
the infinite sum $ \sum_{k\in {\mathbb Z}} a_k {\rm e}^{ik \psi} $ exists, iff $|a_k|= o(k^{-N})$ for all $N \in \mathbb{N}$.

By the Weierstra\ss'\ approximation theorem 
the polynomials  
	\[
	\sum_{k=-N}^N a_k {\rm e}^{ik \psi} \; , \qquad N \in {\mathbb N} \; , 
	\]
are dense in the sup norm in $C(S^1)$. Parseval's identity states that 
	\[
	 \sum_{k\in {\mathbb Z}} | a_k |^2 = \frac{1}{2 \pi} \int_{S^1} {\rm d} \psi \; | h (\psi) |^2 \; . 
	 \]
In case $h \in C^1 (S^1)$, this implies that the Fourier series converges uniformly and absolutely.

\begin{definition} Let $0 \le p \le \infty$.  
The Sobolev space of order $p$ is given by
	\[
		{\mathbb H}^p (S^1)  \doteq \Bigl\{ h \in L^2 (S^1) \mid 
		\sum_{k\in {\mathbb Z}}(1+k^2)^p |a_k |^2 < \infty \Bigr\} \; , 
	\]
where the $\{ a_k \}$ are the Fourier coefficients of $h$, see \eqref{sobolev}. 
\end{definition}	

${\mathbb H}^p (S^1)$ is a Hilbert space with the inner product 
	\[
	\Bigl\langle \, \sum_{j\in {\mathbb Z}} a_j {\rm e}^{i j \psi}  \; , 
	\; \sum_{k\in {\mathbb Z}} b_k {\rm e}^{ik \psi}g \Bigr\rangle_{{\mathbb H}^p (S^1)} 
	= \sum_{k\in {\mathbb Z}}(1+k^2)^p \; a_k \overline {b_k} 
	\]
for $h, g \in {\mathbb H}^p (S^1)$ with Fourier coefficients $\{a_j \}$, $\{ b_k \}$, respectively. The norm is given by 
	\[
	\| h \|_{{\mathbb H}^p (S^1)}  = \Bigl( \, \sum_{k\in {\mathbb Z}} (1+ k^2)^p \; |a_k|^2 \Bigr)^{1/2} \; . 
	\]
The trigonometric polynomials are dense in ${\mathbb H}^p (S^1)$.

\begin{definition} 
\label{sob-circ}
For $0 < p < \infty$, we denote by ${\mathbb H}^{-p}(S^1)$ the dual space of ${\mathbb H}^p (S^1)$, 
\emph{i.e.}, the space of bounded linear functionals on ${\mathbb H}^p (S^1)$. 
\end{definition}	

For $\xi \in {\mathbb H}^{-p}(S^1)$ we have 
	\[
	\| \xi \|_{{\mathbb H}^{-p} (S^1)}  = \Bigl( \, \sum_{k\in {\mathbb Z}} (1+ k^2)^{-p} \; |b_k|^2 \Bigr)^{1/2} \; , 
	\]
where $b_k = \xi ({\rm e}^{i k \psi})$. Furthermore, for each sequence $\{ b_k \}$ satisfying 
	\[
	 \sum_{k\in {\mathbb Z}} (1+ k^2)^{-p} \; |b_k|^2 < \infty \; ,   
	\]
there exists a bounded linear functional $\xi \in {\mathbb H}^{-p} (S^1)$ with $ b_k = \xi ({\rm e}^{ik \psi})$. 

\goodbreak
\begin{proposition}
The elements in ${\mathbb H}^p$ share the following properties:
\begin{itemize}
\item[$ i.)$] If $p <0$, then the elements in ${\mathbb H}^p$ are generalised functions;
\item[$ ii.)$] If $p >1/2$, then the functions $ f \in {\mathbb H}^p$ are continuous;
\item[$ iii.)$] If $p \ge 1$, then the functions $ f \in {\mathbb H}^p$ are differentiable almost everywhere.
\end{itemize}
\end{proposition}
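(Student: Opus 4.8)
The statement collects three elementary facts about the Sobolev spaces $\mathbb{H}^p(S^1)$ defined via the decay of Fourier coefficients $\{a_k\}$ of $f=\sum_k a_k \mathrm{e}^{ik\psi}$. The plan is to treat the three parts separately, each time translating a statement about $f$ into a statement about the weighted $\ell^2$-condition $\sum_k (1+k^2)^p |a_k|^2 < \infty$ and then invoking a standard inequality.

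For part $i.)$, I would argue that when $p<0$ a sequence $\{a_k\}$ with $\sum_k (1+k^2)^p|a_k|^2<\infty$ need not be square-summable (take $a_k = (1+k^2)^{-p/2}/(1+|k|)$, say, which lies in $\mathbb{H}^p$ but whose partial sums diverge in $L^2$), so the corresponding object is not in general an $L^2$-function; it is, however, a well-defined bounded linear functional on $\mathbb{H}^{-p}(S^1)$ in the pairing recalled just above Definition~\ref{sob-circ}, hence a generalised function (distribution) on $S^1$. The only thing to be careful about is to phrase this as ``the elements of $\mathbb{H}^p$ with $p<0$ are in general distributions, not functions,'' since trigonometric polynomials (genuine functions) are of course dense in every $\mathbb{H}^p$.

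For part $ii.)$, the key step is the Cauchy--Schwarz estimate
\[
\sum_{k\in\mathbb{Z}} |a_k| \le \Bigl( \sum_{k\in\mathbb{Z}} (1+k^2)^{-p} \Bigr)^{1/2} \Bigl( \sum_{k\in\mathbb{Z}} (1+k^2)^{p} |a_k|^2 \Bigr)^{1/2} \; ,
\]
whose first factor is finite precisely when $2p>1$, i.e. $p>1/2$ (comparison with $\int^\infty (1+t^2)^{-p}\,\mathrm{d}t$). Thus $\sum_k a_k \mathrm{e}^{ik\psi}$ converges absolutely and uniformly, and the uniform limit of continuous functions is continuous. For part $iii.)$, the formal derivative of $f$ has Fourier coefficients $\{ik\,a_k\}$, and $\sum_k (1+k^2)^{p-1} |ik\,a_k|^2 = \sum_k k^2(1+k^2)^{p-1}|a_k|^2 \le \sum_k (1+k^2)^p|a_k|^2 < \infty$ when $p\ge 1$, so $f'\in \mathbb{H}^{p-1}(S^1) \subset L^2(S^1)$; a function whose distributional derivative lies in $L^2$ is absolutely continuous, hence differentiable almost everywhere (one may either quote the one-dimensional Sobolev embedding $\mathbb{H}^1(S^1)\hookrightarrow C(S^1)$ applied to an antiderivative, or cite the standard characterisation of $\mathbb{H}^1$ on an interval as the absolutely continuous functions with $L^2$ derivative).

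\textbf{Main obstacle.} None of the three parts is genuinely hard; the only real care is conceptual rather than computational. For $i.)$ the point is to say precisely in what sense the elements fail to be functions and in what sense they are distributions (the pairing with $\mathbb{H}^{-p}$), and for $iii.)$ the mild subtlety is the passage from ``distributional derivative in $L^2$'' to ``differentiable a.e.''\ via absolute continuity --- this is where I would be most explicit, otherwise the proof is a two-line Cauchy--Schwarz plus a comparison of the two Dirichlet-type series $\sum_k (1+k^2)^{-p}$ and $\sum_k k^2(1+k^2)^{p-1}|a_k|^2$ against $\sum_k (1+k^2)^p|a_k|^2$.
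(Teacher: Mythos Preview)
Your approach is correct and entirely standard. However, there is nothing to compare against: the paper states this proposition without proof (immediately after the statement it moves on with ``Next we consider the sphere''), treating these as well-known facts about Sobolev spaces on $S^1$. Your Cauchy--Schwarz argument for $ii.)$ and the $\mathbb{H}^1 \hookrightarrow$ absolutely continuous functions argument for $iii.)$ are exactly the textbook proofs one would expect.
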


Next we consider the sphere. The surface element is 
	\[
		{\rm d} \Omega = \cos \psi {\rm d} \psi  {\rm d} \theta \; . 
	\]
We denote by $L^2 (S^2,  {\rm d} \Omega )$ the set of measurable functions
$f$ on the sphere $S^2$ for which
	\[
		\| f \|_{L^2 (S^2,  {\rm d} \Omega )}^2  \doteq \int_{S^2} {\rm d} \Omega \; | f (\theta, \psi) |^2 < \infty \; . 
	\]
A function $f \in L^2 (S^2, \cos \psi {\rm d} \psi  {\rm d} \theta )$ can be expanded, in the $L^2$-sense, 
into its Fourier (Laplace) series (with respect to spherical harmonics)
where 
	\begin{equation}
	\label{F-S2}
		\widetilde f_{\ell, k} \doteq \int_{S^2} {\rm d} \Omega \; f (\theta, \psi) \, \overline{Y_{\ell, k} (\theta, \psi)} \; . 
	\end{equation}

\begin{definition}
\label{sobolev-S2}
The Sobolev $\mathbb{H}^p (S^2)$, $p \ge 0$, 
is the closure of the set of $C^\infty (S^2)$ functions with respect to the norm
	\[
		\| f \|_{\mathbb{H}^p (S^2)} \doteq \left( \sum_{\ell = 0}^\infty \sum_{k= - \ell}^{\ell} 
		\left( \ell + \tfrac{1}{2} \right)^{2p} | \widetilde f_{\ell, k} |^2 \right)^{1/2} \; . 
	\]
\end{definition}

The space $\mathbb{H}^p (S^2)$ is a Hilbert space with inner product 
	\[
		\langle f, g \rangle_{\mathbb{H}^p (S^2)} \doteq 
		\sum_{\ell = 0}^\infty \sum_{k= - \ell}^{\ell } 
		\left( \ell + \tfrac{1}{2} \right)^{2p} \; \overline{ \widetilde f_{\ell, k} } \, \widetilde g_{\ell, k} \; , \qquad f, g \in \mathbb{H}^p (S^2) \; .
	\]
By construction, $\mathbb{H}^0 (S^2)= L^2 (S^2,  {\rm d} \Omega )$. 	

\begin{definition} For $0 < p < \infty$, we denote by ${\mathbb H}^{-p}(S^2)$ the dual space of ${\mathbb H}^p (S^2)$, 
\emph{i.e.}, the space of bounded linear functionals on ${\mathbb H}^p (S^2)$. 
\end{definition}	

For $\xi \in {\mathbb H}^{-p}(S^2)$ we have 
	\[
		\| \xi \|_{{\mathbb H}^{-p} (S^2)}  = \Bigl( \, \sum_{\ell = 0}^\infty \sum_{k= - \ell}^{ \ell } 
	 	\left( \ell + \tfrac{1}{2} \right)^{-2p} \; | b_{\ell, k} |^2 \Bigr)^{1/2} \; , 
	\]
where $b_{\ell, k} = \xi (Y_{\ell, k})$. Furthermore, for each sequence $\{ b_{\ell, k} \}$ satisfying 
	\[
	 	\sum_{\ell = 0}^\infty \sum_{k= - \ell}^{ \ell } 
	 	\left( \ell + \tfrac{1}{2} \right)^{-2p} \; | b_{\ell, k} |^2 < \infty \; ,   
	\]
there exists a bounded linear functional $\xi \in {\mathbb H}^{-p} (S^2)$ with $b_{\ell, k} = \xi (Y_{\ell, k})$.

\chapter{Some identities involving Legendre functions}

In the sequel, we will use the following well-known properties of the Gamma function:
	\begin{align}
			\Gamma(z+1) & =  z  \Gamma(z) \; ,
				\label{eq:gamma-1} \\
			\Gamma(z)\Gamma(1-z) & =  \frac{\pi}{\sin(\pi z)} \; ,
				\label{eq:gamma-2} \\
			\Gamma(z) \Gamma(-z) & =  -\frac{\pi}{z\sin(\pi z)}\; ,
				\label{eq:gamma-3} \\
			\Gamma(2z)	& = \frac{2^{2z-1}}{\sqrt{\pi}}\; \Gamma(z)\Gamma\left(z+\tfrac{1}{2}\right) \; ,
				\label{eq:gamma-4} \\
			\Gamma\big(\overline{z}\big) & = \overline{\Gamma (z)} \; .
				\label{eq:gamma-5}
	\end{align}
They are valid except when the arguments are non-positive integers.

The \emph{Legendre function}\index{Legendre function} $P_s$ solves \cite[8.820]{Grad} the differential equation 
	\[
		\frac{{\rm d}}{{\rm d} z} (1- z^2)\frac{{\rm d}}{{\rm d} z} P_{s}(z) + s(s+1) P_{s}(z) =0 \; . 
	\] 
It is analytic in $\mathbb{C} \setminus (-\infty, \;
-1)$, that means, it has a cut in $(-\infty,\; -1)$.  

\begin{remark}
Setting $z= - \cos \psi$ we find
	\[ \frac{1}{\sin_\psi } 
						\frac{\partial}{\partial \psi}   \sin_\psi \frac{\partial}{\partial \psi} 	
						P_{s}( - \cos \psi) + s(s+1) P_{s}( - \cos \psi) =0 \; . 
	\] 
The \emph{associated Legendre functions} 
	\begin{align*}
		P_s^k (\cos \psi) & \doteq (-1)^k (\sin \psi)^k \tfrac{ {\rm d}^k }{ {\rm d} (\cos \psi)^k} \bigl( P_s (\cos \psi) \bigr) \\
		P_s^{-k} & \doteq (-1)^k \tfrac{(s-k) !}{(s+k)! } P_s^{k} \; ,  
		\qquad k= 0, 1,2, \ldots \; ,
	\end{align*}
are analytic in $\mathbb{C}\setminus (-\infty, \; +1)$. 
\end{remark}
\goodbreak

\begin{lemma}
The function 
	\[
		S(z)\doteq \sqrt{z^2-1}
	\]
is analytic in $\mathbb{C}\setminus (-\infty, \; 1)$ and one has 
	\begin{equation}
	\label{c-root}
		 \lim_{\epsilon \to 0_+}S\big(\epsilon(1\pm i)\big)= {\rm e}^{\pm i\pi/2} \; .
	\end{equation}
\end{lemma}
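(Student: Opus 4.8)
The plan is to (i) pin down the branch of $S$, which makes the analyticity claim immediate, (ii) use Schwarz reflection to reduce the two limits to one, and (iii) compute the remaining boundary value by tracking the argument of $z-1$. First I would make the branch explicit. On $\Omega\doteq\mathbb{C}\setminus(-\infty,1]$ the map $z\mapsto z^{2}-1$ is holomorphic and never zero (its zeros $z=\pm1$ lie on the removed ray) and $\Omega$ is simply connected, so $z^{2}-1$ has exactly two holomorphic square roots on $\Omega$, differing by a global sign; $S$ is by convention the one with $S(x)=+\sqrt{x^{2}-1}>0$ for $x>1$ (equivalently $S(z)=z\sqrt{1-z^{-2}}$, principal square root, for $|z|$ large). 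This already gives analyticity; I would remark that the branch point at $z=1$ forces the cut to be closed, so ``$\mathbb{C}\setminus(-\infty,1)$'' in the statement is to be read as $\mathbb{C}\setminus(-\infty,1]$. For the computation I would record the factorisation
\[
S(z)=\sqrt{z-1}\,\sqrt{z+1},
\]
where $\sqrt{z-1}$ is the branch with $\arg(z-1)\in(-\pi,\pi)$ and $\sqrt{z+1}$ is the principal square root; the right-hand side is holomorphic on $\Omega$ and positive on $(1,\infty)$, hence coincides with $S$ on $\Omega$ by the identity theorem.

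Next I would reduce the two limits to one. Since $S$ is real and strictly positive on the segment $(1,\infty)\subset\Omega$ and $\Omega$ is invariant under complex conjugation, the Schwarz reflection principle gives $S(\bar z)=\overline{S(z)}$ on $\Omega$. As $\overline{\epsilon(1+i)}=\epsilon(1-i)$, it therefore suffices to evaluate $\lim_{\epsilon\to0_{+}}S\!\left(\epsilon(1+i)\right)$; the companion limit is its complex conjugate, and $\overline{e^{i\pi/2}}=e^{-i\pi/2}$, which is exactly the sign pattern in \eqref{c-root}.

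Finally, the computation: for $z=\epsilon(1+i)$ one has $z^{2}=2i\epsilon^{2}$, hence $z-1=-1+\epsilon+i\epsilon$ and $z+1=1+\epsilon+i\epsilon$. As $\epsilon\to0_{+}$, $z+1\to1$ so $\sqrt{z+1}\to1$, while $z-1\to-1$ through the open upper half-plane, so $\arg(z-1)\to\pi^{-}$ and $\sqrt{z-1}\to e^{i\pi/2}=i$; multiplying, $S\!\left(\epsilon(1+i)\right)\to i=e^{i\pi/2}$, and the reflection step then gives $S\!\left(\epsilon(1-i)\right)\to-i=e^{-i\pi/2}$. (An equivalent route, bypassing the factorisation, is to continue $S$ along any path in $\overline{\Omega}\cap\{\Im z\ge0\}$ from $z=2$ to $z=\epsilon(1+i)$ via $S(z)=z\sqrt{1-z^{-2}}$ with principal square root, and take the limit of $\epsilon(1+i)\sqrt{1+i/(2\epsilon^{2})}$.) I expect the only real subtlety---the ``hard part''---to be the bookkeeping in this last step: one must be careful that near the branch point $-1$ of $\sqrt{\,\cdot\,-1}$ the point $z-1$ approaches $-1$ from \emph{above} the cut $(-\infty,1]$ when $z=\epsilon(1+i)$ and from \emph{below} when $z=\epsilon(1-i)$, since this is precisely what produces the opposite values $e^{\pm i\pi/2}$; everything else is routine.
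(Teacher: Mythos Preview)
Your proof is correct; the paper states this lemma without proof, so there is no approach to compare against. One terminological slip in your final paragraph: the branch point of $z\mapsto\sqrt{z-1}$ is at $z=1$, not $-1$; what you mean is that the \emph{value} $z-1$ approaches $-1$, which lies on the branch cut $(-\infty,0]$ of the principal square root, from above when $z=\epsilon(1+i)$ and from below when $z=\epsilon(1-i)$---and this is indeed the key point that produces the two values $e^{\pm i\pi/2}$.
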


\begin{lemma} The Fourier series of the Legendre function is given by
\begin{equation}
P_s(-\cos\psi)
\; = \; 
p(0)+
2\sum_{k=1}^\infty p(k) \cos(k\psi)
\;,
\label{eq:Fourier-Pmuk-1}
\end{equation}
where, for $k\in \mathbb{N}_0$, 
	\begin{equation}
		p(k) \doteq (-1)^k \frac{\Gamma(s -k + 1)}{\Gamma(s+k+1)} 
				\left( \lim_{\epsilon\to 0_+}P_s^k \big(\epsilon(1 + i)\big)\right)
				\left(\lim_{\epsilon\to 0_+} P_s^k \big(\epsilon(1 - i)\big)\right) 	\;.
	\label{eq:pk-1}
\end{equation}
\end{lemma}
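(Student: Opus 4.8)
The plan is to obtain \eqref{eq:Fourier-Pmuk-1}--\eqref{eq:pk-1} as a degenerate case of the Legendre addition theorem \eqref{eq:addition-formula}, letting the two latitude parameters run down to the equator in such a way that the argument of the (undecorated) Legendre function becomes $-\cos\psi$, while the associated Legendre functions that appear as coefficients are evaluated at $0$ \emph{from the two sides of their cut}. As a preliminary I would record the structural facts that make the statement meaningful: $\psi\mapsto P_s(-\cos\psi)$ is even and $2\pi$-periodic, and its only singularity on the circle is the integrable logarithmic one at $\psi\in 2\pi\mathbb Z$ (there $-\cos\psi=-1$, the branch point of $P_s$; everywhere else $-\cos\psi\in(-1,1]$ lies off the cut $(-\infty,-1)$). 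Hence it lies in $L^1(S^1)$ and has a cosine Fourier expansion $P_s(-\cos\psi)=c_0+2\sum_{k\ge 1}c_k\cos k\psi$, with $c_k=\tfrac{1}{2\pi}\int_{-\pi}^{\pi}P_s(-\cos\psi)\,e^{-ik\psi}\,{\rm d}\psi$, convergent in $L^2(S^1)$ and pointwise off $2\pi\mathbb Z$; it then remains to prove $c_k=p(k)$.

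Next I would specialize \eqref{eq:addition-formula}. For small $\epsilon>0$ choose $\theta_1=\theta_1(\epsilon)$ with $\cos\theta_1=\epsilon(1+i)$ and $\theta_1\to\tfrac\pi2$, and $\theta_2=\theta_2(\epsilon)$ with $\cos\theta_2=\epsilon(1-i)$ and $\sin\theta_2\to-1$; both are single-valued for $\epsilon$ small. A one-line computation gives $\cos\theta_1\cos\theta_2=2\epsilon^2\to 0$ and $\sin\theta_1\sin\theta_2=-\sqrt{1+4\epsilon^4}\to-1$, so $\cos\theta_1\cos\theta_2+\sin\theta_1\sin\theta_2\cos\psi\to-\cos\psi$. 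Inserting $(\theta_1,\theta_2,\psi)$ into \eqref{eq:addition-formula} --- legitimate after analytic continuation in $(\theta_1,\theta_2)$ away from the classical principal range --- gives, for every $\epsilon>0$, an identity of exactly the shape that earlier produced \eqref{crucial}, namely $P_s(\cos\theta_1\cos\theta_2+\sin\theta_1\sin\theta_2\cos\psi)=P_s(\cos\theta_1)P_s(\cos\theta_2)+2\sum_{k\ge1}(-1)^k\tfrac{\Gamma(s-k+1)}{\Gamma(s+k+1)}P_s^k(\cos\theta_1)P_s^k(\cos\theta_2)\cos k\psi$.

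Then I would let $\epsilon\to 0_+$. The left side tends to $P_s(-\cos\psi)$ by continuity of $P_s$ on $\mathbb C\setminus(-\infty,-1)$ for $\psi\notin 2\pi\mathbb Z$ (at the exceptional points the identity is understood as an equality in $L^1(S^1)$, which is all that is needed for the Fourier coefficients). On the right, $\epsilon(1+i)$ resp.\ $\epsilon(1-i)$ approaches $0$ from above resp.\ below the cut $(-\infty,1)$ of $P_s^k$, so $P_s^k(\cos\theta_1)\to\lim_{\epsilon\to0_+}P_s^k(\epsilon(1+i))$ and $P_s^k(\cos\theta_2)\to\lim_{\epsilon\to0_+}P_s^k(\epsilon(1-i))$; for $k=0$, $P_s$ is analytic at $0$, both limits equal $P_s(0)$, and the constant term tends to $P_s(0)^2=p(0)$. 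Interchanging the limit with the $k$-summation is justified by dominating the tail uniformly for small $\epsilon$ via the large-$k$ estimate $\tfrac{\Gamma(s-k+1)}{\Gamma(s+k+1)}\,\bigl|P_s^k(\cos\theta_1)P_s^k(\cos\theta_2)\bigr|=O(k^{-2})$, uniform for $\cos\theta_j$ in a neighbourhood of $0$ on the relevant side of the cut. Comparing coefficients with \eqref{eq:pk-1} gives $c_k=p(k)$ for all $k\in\mathbb N_0$, which is \eqref{eq:Fourier-Pmuk-1}.

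The main obstacle I anticipate is the branch-cut bookkeeping in the last two steps: one must verify that \eqref{eq:addition-formula} does continue to the complex, $\epsilon$-dependent parameters $\theta_1(\epsilon),\theta_2(\epsilon)$ --- in particular as $\theta_2$ leaves the interval $(0,\pi)$ --- and that the two factors $P_s^k(\cos\theta_j)$ really converge to the two \emph{distinct} one-sided boundary values in \eqref{eq:pk-1} rather than to a single Ferrers value. This is precisely what the auxiliary limit \eqref{c-root} encodes: $P_s^k$ carries the factor $(z^2-1)^{k/2}$, and the two determinations of $(\,\cdot\,)^{k/2}$ at $0$ differ by the phases $e^{\pm ik\pi/2}$, whose product is $1$, which is exactly what makes $p(k)$ in \eqref{eq:pk-1} real and well-defined even though $0$ lies on the cut. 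Once these phases are tracked correctly, the uniform-convergence estimate enabling the $\epsilon\to0$ interchange is routine.
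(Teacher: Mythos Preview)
Your approach is essentially the same as the paper's: specialize the addition formula \eqref{eq:addition-formula} so that the two arguments approach $0$ from opposite sides of the cut and the left-hand side collapses to $P_s(-\cos\psi)$. The only cosmetic difference is that the paper works directly with $z=\epsilon(1+i)$, $w=\epsilon(1-i)$ and invokes the preceding lemma $\lim_{\epsilon\to0_+}S\bigl(\epsilon(1\pm i)\bigr)=e^{\pm i\pi/2}$ to see that $zw-\sqrt{z^2-1}\sqrt{w^2-1}\cos\psi\to-\cos\psi$, rather than introducing auxiliary angles $\theta_1,\theta_2$ and tracking branches of $\sin\theta_j$; your extra care with the $\epsilon\to0$ interchange is absent from the paper's proof but does no harm.
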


\begin{proof}
For $| \arg (z-1)| <\pi$ and $| \arg (w-1)| <\pi$ and $\Re z >0$ and $\Re w >0$, one has \cite[page 202]{Lebedev}
	\begin{align}
		P_s\Big(zw& -\sqrt{z^2-1} \sqrt{w^2-1} \cos\psi\Big)
		\label{eq:addition-formula}
		 \\
		&	= P_s(z)P_s(w) + 2 \sum_{k=1}^\infty (-1)^k \frac{\Gamma(s -k + 1)}{\Gamma(s+k+1)} 
 				P_s^k (z) P_s^k (w) \cos(k\psi)\;.
		\nonumber
	\end{align}
(This relation is also found in \cite[page 78]{snow}.) Hence, setting $z=\epsilon(1+i)$, $w=\epsilon(1-i)$, and taking
the limit $\epsilon \searrow 0$, we have for the l.h.s.\ of
(\ref{eq:addition-formula}),
	\[ 
		\lim_{\epsilon \searrow 0_+}  P_s\Big(2\epsilon^2 -S\big(\epsilon(1+i)\big)S\big(\epsilon(1+i)\big) \cos\psi\Big)
			\; = \; 
		P_s\left( - \cos\psi\right) \; .
	\]
Setting $z=i\epsilon$,  $w=-i\epsilon$ and taking the limit
$\epsilon \searrow 0$ on the  r.h.s.\ of (\ref{eq:addition-formula}), the lemma follows.
\end{proof}

\begin{lemma}
\begin{equation}
\lim_{\epsilon\to 0_+}P_s^k \big(\epsilon(1 \pm i)\big)
=  
\frac{ {\rm e}^{\pm ik\pi/2}\sqrt{\pi} }{2^k}\,
\frac{\Gamma(s + k + 1)}{  \Gamma(s -k + 1)} 
\frac{1}{\Gamma\left(\frac{k-s+1}{2}\right)\Gamma\left(\frac{k+s}{2}+1\right)}
\;.
\label{eq:Jinboiui-2}
\end{equation}
\end{lemma}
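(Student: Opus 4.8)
The plan is to split $P_s^k(z)$ into an explicit branch factor times a function that is analytic (single-valued) near the origin, take the limit of each factor separately, and then identify the remaining constant with the stated product of $\Gamma$-functions by a short hypergeometric computation.

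First I would use the standard Hobson-type identity $P_s^k(z)=(z^2-1)^{k/2}\,\frac{{\rm d}^k}{{\rm d}z^k}P_s(z)=S(z)^k\,P_s^{(k)}(z)$, valid in the cut plane $\mathbb{C}\setminus(-\infty,1]$, where $S(z)=\sqrt{z^2-1}$ is the branch fixed in the lemma preceding \eqref{eq:Fourier-Pmuk-1} and $P_s^{(k)}$ is the $k$-th derivative of the ordinary Legendre function $P_s$; this is precisely the analytically continued version of $P_s^k$ that enters the addition formula \eqref{eq:addition-formula}. Since $P_s$ is analytic on $\mathbb{C}\setminus(-\infty,-1]$, the derivative $P_s^{(k)}$ is analytic and single-valued in a neighbourhood of $0$, so $\lim_{\epsilon\to 0_+}P_s^{(k)}\big(\epsilon(1\pm i)\big)=P_s^{(k)}(0)$, while \eqref{c-root} gives $\lim_{\epsilon\to 0_+}S\big(\epsilon(1\pm i)\big)^k={\rm e}^{\pm ik\pi/2}$. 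Multiplying, $\lim_{\epsilon\to 0_+}P_s^k\big(\epsilon(1\pm i)\big)={\rm e}^{\pm ik\pi/2}\,P_s^{(k)}(0)$, and it remains only to compute $P_s^{(k)}(0)$.

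For that I would use $P_s(z)={}_2F_1\!\big(-s,s+1;1;\tfrac{1-z}{2}\big)$, differentiate $k$ times term-by-term to obtain $P_s^{(k)}(z)=\big(-\tfrac12\big)^k\frac{(-s)_k(s+1)_k}{k!}\,{}_2F_1\!\big(k-s,\,k+s+1;\,k+1;\,\tfrac{1-z}{2}\big)$, and set $z=0$. The parameters satisfy $k+1=\tfrac12\big((k-s)+(k+s+1)+1\big)$, so the second Gauss summation theorem ${}_2F_1\!\big(a,b;\tfrac{a+b+1}{2};\tfrac12\big)=\frac{\sqrt\pi\,\Gamma(\frac{a+b+1}{2})}{\Gamma(\frac{a+1}{2})\Gamma(\frac{b+1}{2})}$ applies and yields ${}_2F_1\!\big(k-s,k+s+1;k+1;\tfrac12\big)=\frac{\sqrt\pi\,k!}{\Gamma(\frac{k-s+1}{2})\Gamma(\frac{k+s}{2}+1)}$. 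Rewriting the Pochhammer prefactor via $(-s)_k=(-1)^k\Gamma(s+1)/\Gamma(s-k+1)$ and $(s+1)_k=\Gamma(s+k+1)/\Gamma(s+1)$ (both consequences of \eqref{eq:gamma-1}), the factors $(-\tfrac12)^k(-1)^k=2^{-k}$ and $k!/k!=1$ collapse, leaving $P_s^{(k)}(0)=\frac{\sqrt\pi}{2^k}\,\frac{\Gamma(s+k+1)}{\Gamma(s-k+1)}\,\frac{1}{\Gamma(\frac{k-s+1}{2})\Gamma(\frac{k+s}{2}+1)}$; combined with the previous paragraph this is exactly \eqref{eq:Jinboiui-2}.

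The computation itself is routine given the $\Gamma$-function identities \eqref{eq:gamma-1}–\eqref{eq:gamma-5} already collected in this appendix; the only points requiring care are checking that the branch of $(z^2-1)^{k/2}$ in the Hobson identity is the same $S(z)^k$ covered by \eqref{c-root}, and verifying the hypothesis $c=\tfrac{a+b+1}{2}$ of the second Gauss theorem with the shifted parameters (it holds since $a+b=2k+1$). I expect the main obstacle to be purely a matter of conventions: the paper's written definition of $P_s^k$ via the Ferrers-type formula $(-1)^k(\sin\psi)^k\,{\rm d}^k/{\rm d}(\cos\psi)^k$ must be reconciled with the analytically continued function that actually appears in \eqref{eq:addition-formula} and hence in this lemma, and once that identification is pinned down the rest is the short calculation above.
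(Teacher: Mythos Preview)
Your proof is correct and is essentially the same as the paper's. The paper quotes directly the formula $P_s^k(z)=\dfrac{S(z)^k\,\Gamma(s+k+1)}{2^k\,\Gamma(k+1)\,\Gamma(s-k+1)}\,{}_2F_1\!\big(k-s,k+s+1;k+1;\tfrac{1-z}{2}\big)$ from Lebedev, while you derive this same expression from the Hobson identity $P_s^k=S(z)^k P_s^{(k)}$ together with termwise differentiation of $P_s(z)={}_2F_1(-s,s+1;1;\tfrac{1-z}{2})$ and the Pochhammer rewriting you indicate; both proofs then take the limit via \eqref{c-root} and evaluate the hypergeometric factor at $\tfrac12$ by the second Gauss summation with $c=\tfrac{a+b+1}{2}$, arriving at the identical $\Gamma$-product.
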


\begin{proof}
According to \cite[Eq.\ 7.12.27, page 198]{Lebedev}, one has,
for $k\in \mathbb{N}_0$, $|z-1|<2$ and $\mbox{arg}(z-1)<\pi$,
\begin{align*}
P_s^k (z)
& =  
\frac{ \big(z^2-1\big)^{k/2} \Gamma(s + k + 1)}{2^k \Gamma(k + 1) \Gamma(s -k + 1)} \;
F\left( k-s, \; k+s+1, \; k+1; \; \frac{1-z}{2}\right)
\\
& =  
\frac{S(z)^k  \Gamma(s + k + 1)}{2^k \Gamma(k + 1) \Gamma(s -k + 1)}
F\left( k-s, \; k+s+1, \; k+1; \; \frac{1-z}{2}\right)
\;,
\end{align*}
where $F$ is the hypergeometric function

\begin{align}
F(\alpha , \; \beta, \; \gamma,\; z )
& \doteq
1 + 
\sum_{n=1}^\infty
\frac{(\alpha)_{n} (\beta)_{n} }{n! (\gamma)_{n}} \; z^n 
\nonumber \\
& = 
\frac{\Gamma(\gamma)}{\Gamma(\alpha)\Gamma(\beta)}
\sum_{n=0}^\infty
\frac{\Gamma(\alpha+n)\Gamma(\beta+n)}{\Gamma(\gamma+n)}
\; \frac{z^n}{n!} \; ,
\label{eq:def-hipergeometrica}
\end{align}
valid for $|z|<1$. 
Here $(q)_n$ is the Pochhammer symbol, which is defined by
	\[
	(q)_n \doteq 
				\begin{cases} 1 & \text{if \ $n=0$} \; ; \\
							q(q+1) \cdots (q+ n-1) & \text{if \ $n>0$}\; . 
				\end{cases} 
	\]
$F$  is analytic in the whole open unit disk $|z|<1$. Therefore,
	\begin{align}
		\lim_{\epsilon\to 0_+}P_s^k \big(\epsilon(1 \pm i)\big)
			&=  
			\frac{{\rm e}^{\pm ik\pi/2}}{2^k}
			\frac{\Gamma(s + k + 1)}{ \Gamma(k + 1) \Gamma(s -k + 1)} 
			\nonumber \\
			&\qquad \qquad \times F\left( k-s, \; k+s+1, \; k+1; \; \frac{1}{2}\right)
			\;.
		\label{eq:Jinboiui}
	\end{align}

The value of $F(\alpha, \; \beta, \; \gamma; \; z)$ at the point
$z=1/2$ cannot be easily computed from the power series definition
\eqref{eq:def-hipergeometrica}. However, the hypergeometric function satisfies the following
relation (see \cite[Eq.\ 9.6.11, page 253]{Lebedev}):
\begin{align}
\label{eq:smart-identity}
& F\left( 2\alpha , \; 2\beta, \; \alpha + \beta + \tfrac{1}{2};\; \tfrac{1-z}{2} \right)
\\
&\qquad  \qquad =  
\frac{\Gamma\left(\alpha+\beta+\frac{1}{2}\right)\Gamma\left(\frac{1}{2}\right)
  }{
\Gamma\left(\alpha+\frac{1}{2}\right)\Gamma\left(\beta+\frac{1}{2}\right)}
\, F\left(\alpha, \; \beta, \; \tfrac{1}{2}; \; z^2\right)
\nonumber \\
& \qquad \qquad  \qquad \qquad +
z \frac{\Gamma\left(\alpha+\beta+\frac{1}{2}\right)\Gamma\left(-\frac{1}{2}\right)
  }{
\Gamma\left(\alpha\right)\Gamma\left(\beta\right)}
\, F\left(\alpha+\tfrac{1}{2}, \; \beta+\tfrac{1}{2}, \; \tfrac{3}{2}; \; z^2\right)
\;,
\nonumber
\end{align}
valid for all $z\in\mathbb{C}\setminus\big( (-\infty, \; -1)\cup(1, \;
\infty)\big)$ and for all $\alpha+\beta+\frac{1}{2}\not\in -\mathbb{N}_0$
(\emph{i.e.}, for all $\alpha+\beta+\frac{1}{2}\neq 0, \; -1, \; -2,
\ldots$).  Taking $z=0$ in (\ref{eq:smart-identity}), one gets
\begin{equation}
F\left(2\alpha , \; 2\beta, \; \alpha + \beta + \frac{1}{2};\; \frac{1}{2} \right)
\; = \; 
\frac{\Gamma\left(\alpha+\beta+\frac{1}{2}\right)\Gamma\left(\frac{1}{2}\right)
  }{
\Gamma\left(\alpha+\frac{1}{2}\right)\Gamma\left(\beta+\frac{1}{2}\right)}
\;,
\label{eq:oijosidcn}
\end{equation}
since $F\left(\alpha, \; \beta, \; \frac{1}{2}; \; 0\right)=1$ 
(see (\ref{eq:def-hipergeometrica})). By choosing 
$$
\alpha \; = \; \frac{k-s}{2}
\quad \mbox{ and } \quad
\beta \; = \; \frac{k+s+1}{2}
$$
one has 
$\alpha+\beta+\frac{1}{2}  =  k+1$ (which is non-zero for $k\in\mathbb{N}_0$)
and it follows from (\ref{eq:oijosidcn}) that
	\begin{equation}
	\label{d020}
		F\left( k-s, \; k+s+1, \;   k+1; \; \frac{1}{2} \right)
		\; = \; 
		\frac{\sqrt{\pi}\, \Gamma(k+1)}{\Gamma\left(\frac{k-s+1}{2}\right)\Gamma\left(\frac{k+s}{2}+1\right)}
		\;,
	\end{equation}
since $\Gamma\left(\frac{1}{2}\right)=\sqrt{\pi}$.
Inserting \eqref{d020} into (\ref{eq:Jinboiui}), one gets \eqref{eq:Jinboiui-2}.
\end{proof}

\begin{proposition} 
\label{w-coefficients}
The Fourier series of the Legendre function is given by
\begin{equation}
	P_s(-\cos\psi) = p(0)+ 2\sum_{k=1}^\infty p(k) \cos(k\psi) \;,
\end{equation}
where, for $k\in \mathbb{N}_0$, 
\begin{equation}
	\label{eq:pk-6}
		p(k)  = -\frac{\sin\big(\pi s \big)}{\pi} \frac{1}{ (k +s)} \; 
				\frac{\Gamma\left( \frac{k-s}{2} \right)}
					{\Gamma\left(\frac{k+s}{2}\right)}
				\frac{\Gamma\left( \frac{k+s+1}{2} \right)}{\Gamma\left(\frac{k-s+1}{2}\right)} \; ,
			\qquad \forall \, k\in\mathbb{N}_0 \; .
\end{equation}
\end{proposition}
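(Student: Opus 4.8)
\textbf{Proof plan for Proposition~\ref{w-coefficients}.}
The plan is to combine the two preceding lemmas, namely the Fourier expansion \eqref{eq:Fourier-Pmuk-1} with the explicit formula \eqref{eq:pk-1} for its coefficients, together with the boundary-value computation \eqref{eq:Jinboiui-2} for $\lim_{\epsilon \to 0_+} P_s^k(\epsilon(1\pm i))$, and then to simplify the resulting product of $\Gamma$-functions using the identities \eqref{eq:gamma-1}--\eqref{eq:gamma-5}. Since \eqref{eq:Fourier-Pmuk-1} already gives the desired form of the Fourier series, the only thing to prove is that the coefficient $p(k)$ defined in \eqref{eq:pk-1} equals the right-hand side of \eqref{eq:pk-6}.

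First I would substitute \eqref{eq:Jinboiui-2} into \eqref{eq:pk-1}. Writing $L_\pm \doteq \lim_{\epsilon\to 0_+} P_s^k(\epsilon(1\pm i))$, the product $L_+ L_-$ produces the factor ${\rm e}^{ik\pi/2}{\rm e}^{-ik\pi/2}=1$, a factor $\pi/4^k$, the factor $\left(\frac{\Gamma(s+k+1)}{\Gamma(s-k+1)}\right)^2$, and the factor $\left(\Gamma\!\left(\frac{k-s+1}{2}\right)\Gamma\!\left(\frac{k+s}{2}+1\right)\right)^{-2}$. Multiplying by the prefactor $(-1)^k \frac{\Gamma(s-k+1)}{\Gamma(s+k+1)}$ from \eqref{eq:pk-1} cancels one power of $\frac{\Gamma(s+k+1)}{\Gamma(s-k+1)}$, so that
\[
p(k) = (-1)^k \frac{\pi}{4^k}\,\frac{\Gamma(s+k+1)}{\Gamma(s-k+1)}\,
\frac{1}{\Gamma\!\left(\tfrac{k-s+1}{2}\right)^2 \Gamma\!\left(\tfrac{k+s}{2}+1\right)^2}\,.
\]
The next step is to rewrite the quotient $\frac{\Gamma(s+k+1)}{\Gamma(s-k+1)}$ and the factor $4^{-k}$ in a form matching \eqref{eq:pk-6}. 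Using the duplication formula \eqref{eq:gamma-4} in the form $\Gamma(2z)=\frac{2^{2z-1}}{\sqrt\pi}\Gamma(z)\Gamma(z+\tfrac12)$ applied to $\Gamma(s+k+1)$ (with $2z=s+k+1$) and to $\Gamma(s-k+1)$ (with $2z=s-k+1$), together with $\Gamma\!\left(\tfrac{k+s}{2}+1\right)=\tfrac{k+s}{2}\Gamma\!\left(\tfrac{k+s}{2}\right)$ from \eqref{eq:gamma-1}, one converts everything into $\Gamma$-functions of arguments $\tfrac{k\pm s}{2}$ and $\tfrac{k\pm s+1}{2}$. The powers of $2$ bookkeeping: $\Gamma(s+k+1)$ contributes $2^{s+k}$, $\Gamma(s-k+1)$ contributes $2^{s-k}$ in the denominator, so their ratio contributes $2^{2k}=4^k$, cancelling the $4^{-k}$ exactly. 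Finally, the reflection formula \eqref{eq:gamma-2} (or \eqref{eq:gamma-3}) is used on the pair $\Gamma\!\left(\tfrac{k-s+1}{2}\right)\Gamma\!\left(\tfrac{s-k+1}{2}\right)$-type combinations to produce the $-\frac{\sin(\pi s)}{\pi}$ prefactor and to eliminate the $(-1)^k$; one must track that $\sin\!\left(\pi\tfrac{k-s+1}{2}\right)$ and related half-integer-shifted sines combine, via $\sin(\pi s + \pi k) = (-1)^k\sin(\pi s)$, to give the stated sign.

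The main obstacle I expect is purely bookkeeping: correctly pairing up the four $\Gamma$-factors of half-integer-shifted arguments so that each reflection-formula application is legitimate (arguments summing to $1$), keeping careful track of the sign produced by $\sin$ at half-integer-plus-$s$ arguments and of the $(-1)^k$, and verifying that all powers of $2$ cancel. There are no analytic subtleties — \eqref{eq:addition-formula} and hence \eqref{eq:pk-1} are already established, and the limits $L_\pm$ are computed in \eqref{eq:Jinboiui-2} — so the proof reduces to a careful but routine manipulation of Gamma functions, and it suffices to present the chain of equalities from the displayed intermediate formula for $p(k)$ to \eqref{eq:pk-6}, citing \eqref{eq:gamma-1}, \eqref{eq:gamma-2} and \eqref{eq:gamma-4} at the appropriate steps. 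As a consistency check one can also verify \eqref{eq:pk-6} against the known special values, e.g.\ using $P_s(1)=1$ which forces $p(0)+2\sum_{k\ge 1}(-1)^k p(k)=1$, or against the symmetry $\widetilde\omega(k)=\widetilde\omega(-k)$ used later.
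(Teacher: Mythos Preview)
Your plan is correct and follows essentially the same route as the paper: substitute \eqref{eq:Jinboiui-2} into \eqref{eq:pk-1} to reach the intermediate expression $p(k)=(-1)^k\frac{\pi}{4^k}\frac{\Gamma(s+k+1)}{\Gamma(s-k+1)}\Gamma(\tfrac{k-s+1}{2})^{-2}\Gamma(\tfrac{k+s}{2}+1)^{-2}$, then reduce to \eqref{eq:pk-6} via the duplication formula \eqref{eq:gamma-4}, the functional equation \eqref{eq:gamma-1}, and the reflection formula \eqref{eq:gamma-2}/\eqref{eq:gamma-3}. The paper organizes the Gamma manipulations in a slightly different order (applying duplication to $\Gamma(s+k+1)/\Gamma(\tfrac{k+s}{2}+1)$ first, then to $\Gamma(\tfrac{k-s+1}{2})^2$, and using \eqref{eq:gamma-3} on $\Gamma(s-k)\Gamma(k-s)$ to extract the $\sin(\pi s)$ and absorb the $(-1)^k$), but the ingredients and the overall strategy are identical.
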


\begin{proof}
Inserting (\ref{eq:Jinboiui-2}) into 
(\ref{eq:pk-1}), one gets
\begin{equation}
p(k)
= 
(-1)^k \frac{\pi}{2^{2k}}\,
\frac{\Gamma(s + k + 1)}{  \Gamma(s -k + 1)}
\; 
\frac{1}{\Gamma\left(\frac{k-s+1}{2}\right)^2
\Gamma\left(\frac{k+s}{2}+1\right)^2}
\; ,
\quad k\in \mathbb{N}_0
\;.
\label{eq:pk-2}
\end{equation}
Now, using the well-known properties
(\ref{eq:gamma-1})--(\ref{eq:gamma-5}) of the Gamma function, we start
a series of manipulations, in order to write $p(k)$ in a more
convenient fashion.

In (\ref{eq:pk-2}) we consider the factor
\[
	\frac{\Gamma(s + k + 1)}{\Gamma\left(\frac{k+s}{2}+1\right)}
		= 
	\frac{\Gamma(s + k + 1)}{\Gamma\left(\frac{k+s+1}{2}+\frac{1}{2}\right)}
		=  
	\frac{\Gamma(2z)}{\Gamma\left(z+\frac{1}{2}\right)} \;,
\]
by taking $z=\frac{k+s+1}{2}$. From (\ref{eq:gamma-4}), one has
$
\frac{\Gamma(2z)}{\Gamma\left(z+\frac{1}{2}\right)}
=
\frac{2^{2z-1}}{\sqrt{\pi}}\Gamma(z)
$. Hence,
\[
		\frac{\Gamma(s + k + 1)}{\Gamma\left(\frac{k+s}{2}+1\right)}	
			= 
		\frac{2^{2z-1}}{\sqrt{\pi}}\Gamma(z)
			=  
		\frac{2^{k+s}}{\sqrt{\pi}}\Gamma\left( \frac{k+s+1}{2} \right) \;.
\]
Inserting this into (\ref{eq:pk-2}), we get
\begin{align}
	p(k)  = \;  \; &(-1)^k \sqrt{\pi}\, 2^{s-k} \; 
			\frac{\Gamma\left( \frac{k+s+1}{2} \right)}{\Gamma(s -k + 1)\Gamma\left(\frac{k-s+1}{2}\right)^2
			\Gamma\left(\frac{k+s}{2}+1\right)}
		\nonumber \\
		 \stackrel{(\ref{eq:gamma-1})}{=}  &(-1)^k 
			\frac{\sqrt{\pi}\, 2^{s-k+1}}{s^2 - k^2} \; 
			\frac{\Gamma\left( \frac{k+s+1}{2} \right)}{\Gamma(s -k)\Gamma\left(\frac{k-s+1}{2}\right)^2
			\Gamma\left(\frac{k+s}{2}\right)} \;.
\label{eq:pk-3}
\end{align}
Now, we write $\Gamma\left( \frac{k-s+1}{2} \right) = \Gamma\left(
  z+\frac{1}{2} \right)$ with $z=\frac{k-s}{2}$ and, using
(\ref{eq:gamma-4}) we get
\begin{align*}
\Gamma\left( \frac{k-s+1}{2} \right)^2 
& = 
\Gamma\left( z+\frac{1}{2} \right)^2
\stackrel{(\ref{eq:gamma-4})}{=} 
\left( 
     \frac{\Gamma(2z)}{\Gamma(z)} \frac{\sqrt{\pi}}{2^{2z-1}}
\right)^2
\\
&= 
\left( 
\frac{\sqrt{\pi}}{2^{k-s-1}}
\frac{\Gamma(k-s)}{\Gamma\left( \frac{k-s}{2} \right)}
\right)^2
=  
\frac{\pi}{2^{2k-2s-2}}
\frac{\Gamma(k-s)^2}{\Gamma\left( \frac{k-s}{2} \right)^2}
\;.
\end{align*}
Returning to (\ref{eq:pk-3}), we get from this
\begin{equation}
p(k) \; = \; 
(-1)^k  
\frac{2^{ k-s-1} }{\sqrt{\pi}\big(s^2 - k^2\big)} \; 
\frac{\Gamma\left( \frac{k+s+1}{2} \right)
      \Gamma\left( \frac{k-s}{2} \right)^2
}{
\Gamma(s -k) \Gamma(k-s)^2\,
\Gamma\left(\frac{k+s}{2}\right)
}
\;.
\label{eq:pk-4}
\end{equation}
We now write
	\[ 
		\Gamma(s -k) \Gamma(k-s)
			\; \stackrel{(\ref{eq:gamma-3})}{=} \;
				-\frac{\pi}{(s -k)\sin\big(\pi (s -k)\big)}
			\; = \; 
				\frac{(-1)^{k}\pi}{(k -s)\sin\big(\pi s \big)} \; ,
	\]
and inserting this into (\ref{eq:pk-4}) we get
\begin{equation}
p(k) \; = \; 
-\sin\big(\pi s \big)
\frac{2^{ k-s-1} }{\pi^{3/2} (k +s)} \; 
\frac{
\Gamma\left( \frac{k+s+1}{2} \right)
      \Gamma\left( \frac{k-s}{2} \right)^2
}{
\Gamma(k-s)\,
\Gamma\left(\frac{k+s}{2}\right)
}
\;.
\label{eq:pk-5}
\end{equation}
Taking $z=\frac{k-s}{2}$, we have
\[ 
	\frac{\Gamma\left( \frac{k-s}{2} \right)}{\Gamma(k-s)}
\; = \;
\frac{\Gamma(z)}{\Gamma(2z)}
\; \stackrel{(\ref{eq:gamma-4})}{=} \; 
\frac{\sqrt{\pi}}{2^{2z-1}}\;
\frac{1}{\Gamma\left(z+\frac{1}{2}\right)} 
\; = \; 
\frac{\sqrt{\pi}}{2^{k-s-1}}\;
\frac{1}{\Gamma\left(\frac{k-s+1}{2}\right)}
\; .
\]
Returning with this to (\ref{eq:pk-5}), we find \eqref{eq:pk-6}.
\end{proof}

\begin{remark}
Comparing (\ref{eq:Fourier-Pmuk-1}) with (\ref{eq:Fourrier-Pmu}) we see that
$p_k=\sqrt{2\pi r} p\big(|k|\big)$, for all $k\in\mathbb{Z}$. Thus,
from the definition (\ref{eq:definition-omegak-0}) we get
\begin{equation}
\label{eq:symmetry-of-tildeomega}
\widetilde{\omega}(k)=\widetilde{\omega}(-k)
\end{equation}
for all $k\in\mathbb{Z}$.  Actually, we can directly establish that
$p(k)=p(-k)$ for all $k$. This is the content of the next lemma.
\end{remark}

\begin{lemma}For all $k\in\mathbb{Z}$, we have
$p(k)=p(-k)$.
\end{lemma}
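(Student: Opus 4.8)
The plan is to regard the closed-form expression \eqref{eq:pk-6} for $p(k)$, which was derived only for $k\in\mathbb{N}_0$, as the restriction to the integers of the meromorphic function
\[
  g(z) \doteq -\frac{\sin(\pi s)}{\pi\,(z+s)}\;
  \frac{\Gamma\!\left(\frac{z-s}{2}\right)\,\Gamma\!\left(\frac{z+s+1}{2}\right)}
       {\Gamma\!\left(\frac{z+s}{2}\right)\,\Gamma\!\left(\frac{z-s+1}{2}\right)}\;,
\]
and to prove that $g(k)=g(-k)$ for every $k\in\mathbb{Z}$. For $s=s^{\pm}=-\tfrac12\mp i\nu$ in the principal or complementary series none of the four $\Gamma$-arguments is a non-positive integer at an integer value of $z$, so $g$ is regular there and the statement is unambiguous; it then implies $p(k)=p(-k)$ and, via the identification $p_k=\sqrt{2\pi r}\,p(|k|)$, the symmetry $\widetilde\omega(k)=\widetilde\omega(-k)$ asserted in \eqref{eq:symmetry-of-tildeomega}.

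First I would introduce the abbreviations $a\doteq\tfrac{k-s}{2}$, $b\doteq\tfrac{k+s}{2}$, $c\doteq\tfrac{k+s+1}{2}=b+\tfrac12$, $d\doteq\tfrac{k-s+1}{2}=a+\tfrac12$, so that $a+b=k$, $k+s=2b$ and $-k+s=-2a$. With these,
\[
  g(k)=-\frac{\sin(\pi s)}{2\pi\,b}\;\frac{\Gamma(a)\,\Gamma(c)}{\Gamma(b)\,\Gamma(d)}\;,
\]
while substituting $z=-k$ and using $\tfrac{-k-s}{2}=-b$, $\tfrac{-k+s}{2}=-a$, $\tfrac{-k+s+1}{2}=\tfrac12-a$, $\tfrac{-k-s+1}{2}=\tfrac12-b$ gives
\[
  g(-k)=-\frac{\sin(\pi s)}{-2a\,\pi}\;\frac{\Gamma(-b)\,\Gamma\!\left(\tfrac12-a\right)}{\Gamma(-a)\,\Gamma\!\left(\tfrac12-b\right)}\;.
\]

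Next I would rewrite the four $\Gamma$-values at negative arguments by the reflection formulas \eqref{eq:gamma-2}--\eqref{eq:gamma-3}: from $\Gamma(z)\Gamma(-z)=-\pi/\bigl(z\sin(\pi z)\bigr)$ one gets $\Gamma(-b)=-\pi/\bigl(b\sin(\pi b)\Gamma(b)\bigr)$ and $\Gamma(-a)=-\pi/\bigl(a\sin(\pi a)\Gamma(a)\bigr)$; and from $\Gamma(\tfrac12+x)\Gamma(\tfrac12-x)=\pi/\cos(\pi x)$, together with $\tfrac12+a=d$ and $\tfrac12+b=c$, one gets $\Gamma(\tfrac12-a)=\pi/\bigl(\cos(\pi a)\Gamma(d)\bigr)$ and $\Gamma(\tfrac12-b)=\pi/\bigl(\cos(\pi b)\Gamma(c)\bigr)$. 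Inserting these into the last display, the powers of $\pi$ cancel, the factor $a$ cancels against $1/(-2a)$, and one is left with
\[
  g(-k)=\frac{\sin(\pi s)}{2\pi\,b}\;\frac{\sin(\pi a)\cos(\pi b)}{\sin(\pi b)\cos(\pi a)}\;\frac{\Gamma(a)\,\Gamma(c)}{\Gamma(b)\,\Gamma(d)}\;.
\]
Comparing with the formula for $g(k)$, the desired equality $g(-k)=g(k)$ is then equivalent to $\sin(\pi a)\cos(\pi b)=-\cos(\pi a)\sin(\pi b)$, i.e.\ to $\sin\!\bigl(\pi(a+b)\bigr)=0$; and this holds because $a+b=k\in\mathbb{Z}$, so $\sin(\pi k)=0$.

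The only point that deserves care — rather than any genuine obstacle — is the very meaning of $p(-k)$: since \eqref{eq:pk-6} was obtained only for nonnegative $k$, the lemma should be read as the assertion that the right-hand side of \eqref{eq:pk-6} is an even function of $k$ on $\mathbb{Z}$, which is exactly what the computation above establishes (alternatively one can run the same reflection-formula bookkeeping starting from \eqref{eq:pk-1} using $P_s^{-k}=(-1)^k\frac{\Gamma(s-k+1)}{\Gamma(s+k+1)}P_s^{k}$). The whole argument is elementary; the crucial input is merely the vanishing of $\sin(\pi k)$ at integers, the same integrality that made the addition theorem \eqref{eq:addition-formula} collapse to \eqref{eq:Fourier-Pmuk-1} in the first place.
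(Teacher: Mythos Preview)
Your proof is correct and takes essentially the same approach as the paper: both apply the Gamma-function reflection formulas to convert $\mathscr{F}(-k)$ (your $g(-k)$) into $\mathscr{F}(k)$ times a ratio of trigonometric factors, and then show that this ratio equals $-1$ precisely because $k\in\mathbb{Z}$. Your substitutions $a,b,c,d$ and the reduction to $\sin(\pi(a+b))=\sin(\pi k)=0$ are a slightly tidier packaging of what the paper does via product-to-sum identities, but the content is the same.
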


\begin{proof}
Until now we considered $k\in\mathbb{N}_0$ but, for $s\not\in\mathbb{Z}$,
(\ref{eq:pk-6}) is well-defined for all $k\in\mathbb{Z}$ and will now show
that $p(k)=p(-k)$ for all $k\in\mathbb{Z}$. Let 
	\[ 
			{\mathscr F}(k) \doteq \frac{1}{ (k +s)} \; \frac{ \Gamma\left( \frac{k-s}{2} \right) }{
											\Gamma\left(\frac{k+s}{2}\right) }
										\frac{\Gamma\left( \frac{k+s+1}{2} \right)}
											{\Gamma\left(\frac{k-s+1}{2}\right)} \;.
	\]
Then,
	\[
			{\mathscr F}(-k) = \frac{1}{ (-k +s)} \; \frac{ \Gamma\left( \frac{-k-s}{2} \right) }{
											\Gamma\left(\frac{-k+s}{2}\right) }
					\frac{\Gamma\left( \frac{-k+s+1}{2} \right)}{\Gamma\left(\frac{-k-s+1}{2}\right)} \;.
	\]
Now,
	\[
		\frac{ \Gamma\left( \frac{-k-s}{2} \right) }{
			\Gamma\left(\frac{-k+s}{2}\right)}
		= \frac{ \left( -k+s \right) }{\left( -k-s \right)}
			\frac{  \sin\left(\pi\frac{-k+s}{2}\right) }{
      				\sin\left( \pi\frac{-k-s}{2} \right) }
			\frac{\Gamma\left(\frac{k-s}{2}\right)}
				{ \Gamma\left( \frac{k+s}{2} \right)}
	\]
and
	\[
		\frac{ \Gamma\left( \frac{-k+s+1}{2} \right) }{
				\Gamma\left(\frac{-k-s+1}{2}\right)}	
		= \frac{ \Gamma\left( 1- \frac{k-s+1}{2} \right) }{
				\Gamma\left( 1-\frac{k+s+1}{2}\right)}
		= \frac{\sin\left( \pi\frac{k+s+1}{2}\right)}{\sin\left( \pi\frac{k-s+1}{2} \right)}
			\frac{\Gamma\left( \frac{k+s+1}{2}\right)}{
				\Gamma\left( \frac{k-s+1}{2} \right)} \;.
	\]
Therefore,
	\begin{align}
		\label{eq:yubuuytFytf}
		{\mathscr F}(-k) & = \frac{1}{ (-k +s)} \; 
					\frac{ \left( -k+s \right) }{
 						     \left( -k-s \right)}
				\left( \frac{ \sin\left(\pi\frac{-k+s}{2}\right)}{
    						  \sin\left( \pi\frac{-k-s}{2} \right)}
					\frac{\sin\left( \pi\frac{k+s+1}{2}\right)}{
						\sin\left(\pi \frac{k-s+1}{2} \right)}
				\right)
								 \\
			& \qquad \qquad \qquad \qquad \qquad \qquad \qquad \qquad	\times\left[	\frac{\Gamma\left(\frac{k-s}{2}\right)}{
						    \Gamma\left( \frac{k+s}{2} \right)}
					\frac{\Gamma\left( \frac{k+s+1}{2}\right)}{
							\Gamma\left( \frac{k-s+1}{2} \right)}
				\right] \nonumber \\
			& = \frac{1}{ -k -s} \;  \left( \frac{ \sin\left(\pi\frac{-k+s}{2}\right)}{
      										\sin\left( \pi\frac{-k-s}{2} \right)}
									\frac{\sin\left( \pi\frac{k+s+1}{2}\right)}{
										\sin\left(\pi \frac{k-s+1}{2} \right)}
								\right)
				 \left[\frac{\Gamma\left(\frac{k-s}{2}\right)}{	
						\Gamma\left( \frac{k+s}{2} \right)}
					\frac{\Gamma\left( \frac{k+s+1}{2}\right)}{
						\Gamma\left( \frac{k-s+1}{2} \right)}
				\right] \;. \nonumber
	\end{align}
Using $\sin (a) \sin (b) = \frac{\cos(a-b)-\cos(a+b)}{2}$, we get
	\begin{align*} 
	\frac{ \sin\left(\pi\frac{-k+s}{2}\right)}{ \sin\left( \pi\frac{-k-s}{2} \right)}
	\frac{\sin\left( \pi\frac{k+s+1}{2}\right)}{\sin\left( \pi\frac{k-s+1}{2} \right)}
	&= \frac{\cos \left( -\pi k -\frac{\pi}{2} \right) - \cos\left( \pi s +\frac{\pi}{2} \right) }
		{\cos\left( -\pi k -\frac{\pi}{2}\right) - \cos\left( -\pi s +\frac{\pi}{2}\right)}
	\\
	&=  \frac{\cos\left( \pi s +\frac{\pi}{2}\right)}{\cos\left( \pi s -\frac{\pi}{2}\right)}
	=  -1 \;.
	\end{align*}
Hence, returning to (\ref{eq:yubuuytFytf}),
	\[
		{\mathscr F}(-k) =  \frac{1}{ k +s} \; 
				\frac{\Gamma\left(\frac{k-s}{2}\right)}
					{\Gamma\left( \frac{k+s}{2} \right)}
				\frac{\Gamma\left( \frac{k+s+1}{2}\right)}{
					\Gamma\left( \frac{k-s+1}{2} \right)}
			=  {\mathscr F}(k) \;.
	\]
This establishes that $p(k)=p(-k)$ for all $k\in\mathbb{Z}$.
\end{proof}

\begin{proposition} Let $\widehat{{\mathfrak h}}  (S^1)$ be defined in \eqref{eq:def-scalar-product-2}, and let 
$f, g \in {\mathscr D}(\omega)$. It follows that  
\label{prop:E5}
\begin{equation}
\langle \omega r f , \omega r g \rangle_{\widehat{{\mathfrak h}}  (S^1)} = -\frac{r^2}{2 \sin(\pi s^+)}
\int_{S^1 \times S^1} \kern-.3cm {\rm d} \psi\, {\rm d} \psi' \; \overline{f(\psi')} \, 
P_{s}'\big( - \cos(\psi' - \psi) \big) \, g(\psi)
\;.
\label{eq:def-scalar-product-2b-BB}
\end{equation}
\end{proposition}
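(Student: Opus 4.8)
The plan is to reduce everything to the Fourier-coefficient computation already carried out in Proposition~\ref{thhm} and Proposition~\ref{w-coefficients}, together with the Fourier series of the \emph{derivative} $P_s'$ of the Legendre function. First I would recall from Proposition~\ref{thhm}~$i.)$ that
\[
\langle h , h' \rangle_{\widehat{{\mathfrak h}} (S^1)} = \sqrt{2\pi r}\, c_\nu \sum_{k\in\mathbb{Z}} p_k \, \overline{h_k}\, h'_k \; ,
\]
where $h_k,h'_k$ are the Fourier coefficients of $h,h'$ with respect to the basis $\{{\rm e}^{-ik\psi}/\sqrt{2\pi r}\}$, and $p_k = \sqrt{2\pi r}\, p(|k|)$ with $p(k)$ given explicitly in \eqref{eq:pk-6}. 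Equivalently, $\omega$ is diagonal with $\widetilde\omega(k) = -\tfrac{2\sin(\pi s^+)}{\sqrt{2\pi r}}\,\tfrac{1}{r p_k}$. Taking $h = \omega r f$ and $h' = \omega r g$ (legitimate since $C^\infty(S^1)\subset{\mathscr D}(\omega)$), the left-hand side of \eqref{eq:def-scalar-product-2b-BB} becomes
\[
\langle \omega r f, \omega r g\rangle_{\widehat{{\mathfrak h}}(S^1)} = \sqrt{2\pi r}\, c_\nu \sum_{k\in\mathbb{Z}} p_k\, r^2\, \widetilde\omega(k)^2\, \overline{f_k}\, g_k = r^2 \sqrt{2\pi r}\, c_\nu \sum_{k\in\mathbb{Z}} p_k^{-1}\Bigl(\tfrac{2\sin(\pi s^+)}{\sqrt{2\pi r}}\Bigr)^2 \tfrac{1}{r^2} \cdot \tfrac{1}{ } \; ,
\]
which after simplification reads $\langle \omega r f, \omega r g\rangle_{\widehat{{\mathfrak h}}(S^1)} = \sqrt{2\pi r}\, c_\nu \sum_{k} q_k\, \overline{f_k}\, g_k$ for suitable coefficients $q_k$ proportional to $p_k^{-1}$ (up to the fixed constant $(2\sin\pi s^+)^2/(2\pi r)$). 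The point is that this is again a diagonal quadratic form, so it suffices to identify $q_k$ with the $k$-th Fourier coefficient of a convolution kernel.

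The second ingredient is the Fourier series of $P_s'(-\cos(\psi-\psi'))$. Differentiating \eqref{eq:Fourier-Pmuk-1} and using that $\tfrac{\rm d}{{\rm d}z}P_s(z)$ is expressible through $P_s^1$ (the associated Legendre function of order $1$), one obtains a series
\[
P_s'\big(-\cos(\psi'-\psi)\big) = \sum_{k\in\mathbb{Z}} \widetilde{P'_s}(k)\, \frac{{\rm e}^{ik(\psi'-\psi)}}{2\pi} \; ,
\]
whose coefficients $\widetilde{P'_s}(k)$ can be computed by exactly the same $\Gamma$-function manipulations as in the proof of Proposition~\ref{w-coefficients}; in fact one can avoid redoing the computation by invoking the recurrence relating $P_s'$ to $P_s$, namely $(1-z^2)P_s'(z) = (s+1)\big(z P_s(z) - P_{s+1}(z)\big)$ or the contiguous relation $\tfrac{\rm d}{{\rm d}z}P_s = \tfrac{s(s+1)}{2s+1}\big(P_{s-1}-P_{s+1}\big)/(1-z^2)$ suitably interpreted, together with the known Fourier coefficients $p(k)$. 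I would then check that $-\tfrac{r^2}{2\sin(\pi s^+)}\widetilde{P'_s}(k)$ coincides with $\sqrt{2\pi r}\, c_\nu\, q_k$ for every $k$, i.e.\ that the two diagonal forms agree coefficient-wise. Given that identification, \eqref{eq:def-scalar-product-2b-BB} follows by Parseval, since both sides are continuous sesquilinear forms on ${\mathscr D}(\omega)\times{\mathscr D}(\omega)$ and agree on the Fourier basis; alternatively, this is precisely what was claimed without proof in Proposition~\ref{th4.20}, so one may also simply cite that statement and the coincidence of constants.

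The main obstacle will be the bookkeeping of constants and the precise form of $\widetilde{P'_s}(k)$: the factor $\widetilde\omega(k)$ in \eqref{eq:omega-1} involves a ratio of four $\Gamma$-functions at half-integer-shifted arguments, and squaring it and multiplying by $p_k^{-1}$ produces an expression that must be massaged—via the reflection formula \eqref{eq:gamma-2}, the duplication formula \eqref{eq:gamma-4}, and the recurrence \eqref{eq:gamma-1}—into the Fourier coefficient of $P_s'$. The cleanest route, which I would pursue first, is to use the already-established identities \eqref{eq:useful-1}--\eqref{eq:useful-2}, i.e.\ $\widetilde\omega(k)\widetilde\omega(k\pm1) = r^{-2}k(k\pm1)+\mu^2$, to recognize that $r^2\widetilde\omega(k)^2 p_k$ telescopes against $p_{k\pm1}$ in a way that matches the three-term recurrence satisfied by the Legendre coefficients of $P_s'$; this turns the brute-force $\Gamma$-function identity into a short algebraic check. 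If that shortcut stalls, the fallback is the direct computation modeled line-by-line on the proof of Proposition~\ref{w-coefficients}, which is routine but lengthy.
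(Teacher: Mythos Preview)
Your overall strategy---reduce both sides to diagonal quadratic forms in Fourier coefficients and match---is exactly what the paper does, and your observation that $p_k\,\widetilde\omega(k)^2$ is proportional to $p_k^{-1}$ (hence to $\widetilde\omega(k)$) is the algebraic core of the argument. The difference lies in how the Fourier coefficients of $P_s'(-\cos\varphi)$ are actually obtained.

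Your primary shortcuts will not work as written. Differentiating the series \eqref{eq:Fourier-Pmuk-1} in $\varphi$ produces $\sin\varphi\cdot P_s'(-\cos\varphi)$, not $P_s'(-\cos\varphi)$; likewise the recurrence $(1-z^2)P_s'(z) = (s+1)\bigl(zP_s(z)-P_{s+1}(z)\bigr)$ at $z=-\cos\varphi$ gives $\sin^2\varphi\cdot P_s'(-\cos\varphi)$ in terms of known objects. Either way you inherit a convolution with the Fourier coefficients of a power of $\sin\varphi$, and the ``three-term recurrence'' you hope for does not materialize directly. The identities \eqref{eq:useful-1}--\eqref{eq:useful-2} are useful for checking the final formula but do not by themselves generate the Fourier coefficients of $P_s'$.

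The paper's route---which is precisely your fallback---is to take $\partial_z\partial_w$ of the addition formula \eqref{eq:addition-formula} \emph{before} specializing $z,w\to 0$. Because the $\psi$-dependence in that formula sits entirely in the $\cos(k\psi)$ factors, differentiation in the auxiliary variables $z,w$ leaves the Fourier structure intact, and the limit yields a clean series for $P_s'(-\cos\psi)$ with coefficients $(-1)^k\tfrac{\Gamma(s-k+1)}{\Gamma(s+k+1)}\bigl|{P_s^k}'(0)\bigr|^2$. One then uses the Lebedev identity $(z^2-1){P_s^k}'(z)=szP_s^k(z)-(s+k)P_{s-1}^k(z)$ to get ${P_s^k}'(0)=(s+k)P_{s-1}^k(0)$, which shows $p_s^1(k)=(s+k)(s-k)\,p_{s-1}(k)$. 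Plugging in \eqref{eq:pk-6} with $s\to s-1$ and simplifying via \eqref{eq:gamma-1} gives $p_s^1(k)=-\tfrac{\sin(\pi s)}{\pi}\,r\,\widetilde\omega_s(k)$, which is the match you need. So commit to the fallback from the start: the key is to differentiate in $z,w$, not in $\psi$.
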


\begin{proof} In what follows we will denote the coefficients $p(k)$, $p_k$ and
$\omega(k)$ by $p_s(k)$, $p_{k, \, s}$ and $\omega_s(k)$, respectively.

For $s\in \mathbb{C}\setminus\mathbb{Z}$, define
\begin{align*}
			\langle \langle f ,\, \; g  \rangle \rangle
			& \doteq   c_\nu \int_{S^1} r\, d\psi \int_{S^1} r \, d\psi' \; \overline{f(\psi')} \, g(\psi)
				P_{s}' \big( - \cos(\psi' - \psi) \big) 
				\\
			& = -\frac{1}{  2 
			\sin(\pi s^+)}  \int_{S^1} r\, d\psi \int_{S^1} r \, d\psi' \; \overline{f(\psi')} \, g(\psi)
							P_{s}'\big( - \cos(\psi' - \psi) \big)   \;.
\end{align*}
We write
	\begin{equation*}
		P_{s}'\big( - \cos(\varphi) \big) =  \sum_{k\in\mathbb{Z}} p_{k, \,s}^1 \; \frac{{\rm e}^{ik\varphi}}{\sqrt{2\pi r}} \; ,
	\end{equation*}
and, as in (\ref{eq:Jhoitrytfsd}), we get
	\begin{equation}
			\langle \langle f ,\, \; g  \rangle \rangle
		 =  -\frac{\sqrt{2\pi r}}{  2   \sin(\pi s^+)} \sum_{k\in\mathbb{Z}} p_{k, \,s}^1 \, \overline{f_k} \, g_k \;,
		\label{eq:Jhoitrytfsd-BB}
	\end{equation}
where $f_k$ and $g_k$ are the Fourier coefficients of $f$ and $g$, respectively, i.e.,
	\[
		f_k \doteq \int_{S^1} r \, d\psi' \;  f(\psi') \frac{{\rm e}^{-ik\psi'}}{\sqrt{2\pi}}
			\qquad \mbox{ and } \qquad
		g_k \doteq \int_{S^1} r \, d\psi \;  g(\psi) \frac{{\rm e}^{-ik\psi}}{\sqrt{2\pi}} \; .
	\]
Taking the mixed derivatives $\partial_z\partial_w$ of both sides in
(\ref{eq:addition-formula}), we get
\begin{align}
	&P_s''\bigl( zw-\sqrt{z^2-1} \sqrt{w^2-1} \cos\psi \bigr)
			\left(  w- \tfrac{z \sqrt{w^2-1}}{\sqrt{z^2-1}}  \cos\psi  \right)
			\left(  z- \tfrac{w\sqrt{z^2-1} }{\sqrt{w^2-1}}  \cos\psi  \right)
		\nonumber \\
		& \qquad + 	P_s' \bigl( zw-\sqrt{z^2-1} \sqrt{w^2-1} \cos\psi \bigr)
					\left(  1- \frac{zw}{\sqrt{z^2-1}\sqrt{w^2-1}} \cos\psi  \right)
		\nonumber \\
		&= P_s'(z)P_s'(w) + 2 \sum_{k=1}^\infty (-1)^k
				\frac{\Gamma(s -k + 1)}{\Gamma(s+k+1)} 
				{P_s^k }'(z) {P_s^k} '(w) \cos(k\psi)\;.
\label{eq:addition-formula-BB}
\end{align}
Writing  $z=\epsilon(1+i)$, $w=\epsilon(1-i)$ (with $\epsilon>0$) and taking
the limit $\epsilon\to0_+$, we get from (\ref{eq:addition-formula-BB})
\begin{align*}
	P_s' ( - \cos\psi ) & =  \bigl(P_s'(0)\bigr)^2 + 2 \sum_{k=1}^\infty 
								(-1)^k \frac{\Gamma(s -k + 1)}{\Gamma(s+k+1)} 
								\left( \lim_{\epsilon_\to 0_+} {P_s^k }'\big( \epsilon(1+i) \big) \right) 
						\\
						& \qquad \qquad \qquad \qquad \qquad \qquad \times 
								\left( \lim_{\epsilon_\to 0_+} {P_s^k} '\big( \epsilon(1-i) \big) \right) 
								\cos(k\psi)
						\\
						& =  p^1_s(0) + 2 \sum_{k=1}^\infty p^1_s(k)\cos(k\psi) \;,
\end{align*}
with
	\[
	p^1_s(k) \doteq (-1)^k \frac{\Gamma(s -k + 1)}{\Gamma(s+k+1)} 
						\left( \lim_{\epsilon_\to 0_+} {P_s^k }'\big( \epsilon(1+i) \big) \right) 
						\left( \lim_{\epsilon_\to 0_+} {P_s^k} '\big( \epsilon(1-i) \big) \right) \;.
	\]
Now, according to \cite[Eq.\ (7.12.16), page 195]{Lebedev}, one has
	\[
		\bigl( z^2 - 1 \bigr) {P_s^k }'\big( z \big) = s z P_s^k ( z )
											- (s+k)P_{s-1}^k ( z ) \;, \qquad k\in \mathbb{N}_0 \; . 
	\]
Hence,
	\[
		\lim_{\epsilon_\to 0_+} {P_s^k }'\bigl( \epsilon(1\pm i) \bigr) 
			= (s+k)\lim_{\epsilon_\to 0_+} P_{s-1}^k \bigr(  \epsilon(1\pm i) \bigr) \; ,
	\]
and from this we have
\begin{align*}
		p^1_s(k) & = (-1)^k (s+k)^2 \frac{\Gamma(s -k + 1)}{\Gamma(s+k+1)} 
					\left(  \lim_{\epsilon_\to 0_+} P_{s-1}^k \bigl(  \epsilon(1 + i) \bigr) \right)
					\\
				& \qquad \qquad \qquad \qquad \qquad \qquad \qquad \times 
					\left( \lim_{\epsilon_\to 0_+} P_{s-1}^k  \bigl(  \epsilon(1 - i) \bigr) \right)
					\\
				& = (-1)^k (s+k)(s-k)  \frac{\Gamma(s -k )}{\Gamma(s+k)} 
					\left( \lim_{\epsilon_\to 0_+} P_{s-1}^k \bigl(  \epsilon(1 + i) \bigr) \right)
					\\
				& \qquad \qquad \qquad \qquad \qquad \qquad \qquad \times 
					\left( \lim_{\epsilon_\to 0_+} P_{s-1}^k \bigl(  \epsilon(1 - i) \bigr) \right)
					\\
				& \stackrel{(\ref{eq:pk-1})}{=} (s+k)(s-k) p_{s-1}(k) \; .
\end{align*}
Since $p_{s-1}(k)=p_{s-1}(-k)$, $k\in\mathbb{Z}$, it follows from the last equality that
$p^1_s(k)=p^1_s(-k)$,  $k\in\mathbb{Z}$, and we have
\[
	P_s' ( - \cos\psi ) = \sum_{k\in\mathbb{Z}} p_{k, \, s}^1 \frac{ {\rm e}^{ik\psi} }{\sqrt{2\pi r}}
\]
with
\[
 	p_{k, \, s}^1 = \sqrt{2\pi r} \; p^1_s(k) = \sqrt{2\pi r} \; (s+k)(s-k) p_{s-1}(k) \; ,
	\qquad k\in\mathbb{Z} \; . 
\]
Now, we have
\begin{align}
		(s+k)(s-k) p_{s-1}(k) & \stackrel{(\ref{eq:pk-6})}{=} 
						 \frac{-\sin\big(\pi s -\pi\big)}{\pi} \frac{(s+k)(s-k)}{ (k +s-1)} \; 
						\label{eq:IUbuytytryxsga}
						\\
						& \qquad \qquad \qquad \qquad \times \frac{ \Gamma\left( \frac{k-s+1}{2} \right)}
										{\Gamma\left(\frac{k+s-1}{2}\right)}
									\frac{\Gamma\left( \frac{k+s}{2} \right)}
										{\Gamma\left(\frac{k-s+2}{2}\right)} \; . 
										\nonumber
\end{align}
Since $\sin (\pi s -\pi )=-\sin (\pi s )$, 
$\Gamma\left(\frac{k-s+2}{2}\right)=\Gamma\left(\frac{k-s}{2}+1\right)
= \frac{k-s}{2} \Gamma\left(\frac{k-s}{2}\right)$ 
and
$(k +s-1)\Gamma\left(\frac{k+s-1}{2}\right)=2\Gamma\left(\frac{k+s-1}{2}+1\right)
=
2\Gamma\left(\frac{k+s+1}{2}\right)
$, relation (\ref{eq:IUbuytytryxsga}) becomes
	\[
		(s+k)(s-k) p_{s-1}(k) = -(s+k) 
					\frac{\sin\big(\pi s\big)}{\pi}
					\frac{\Gamma\left( \frac{k-s+1}{2} \right)}
						{\Gamma\left(\frac{k+s+1}{2}\right)}
					\frac{\Gamma\left( \frac{k+s}{2} \right)}{\Gamma\left(\frac{k-s}{2}\right)} \;.
	\]
Comparing with 
(\ref{eq:omega-1}), we find
	\[
		(s+k)(s-k) p_{s-1}(k) =  - \frac{\sin (\pi s )}{\pi} \; \widetilde \omega_s(k) \, r \; .
	\]
Here $\omega_s \equiv \omega$, with the index indicating the 
dependence of $\omega$ on $s$. The latter  had been suppressed in the main text. 
Hence,
	\[  
		p_{k, \, s^+}^1 = -\sqrt{\frac{2 r }{\pi}}\sin (\pi s )\widetilde \omega_s(k) \, r
	\]
Returning to (\ref{eq:Jhoitrytfsd-BB}) we have
	\begin{align*}
			\langle \langle f ,\, \; g  \rangle \rangle & 
			 =  -\frac{\sqrt{2\pi r}}{ 2  \sin(\pi s^+)} \sum_{k\in\mathbb{Z}} p_{k, \,s}^1 \, \overline{f_k} \, g_k  \\
		&		=  r^2 \left\langle  f, \; \omega g \right\rangle_{L^2(S^1, \; r d\psi)} \\
		&  = 	 r^2  \left\langle \omega f, \; \omega g \right\rangle_{\widehat{{\mathfrak h}}  (S^1)} 
		\;.
	\end{align*}
\end{proof}

\backmatter

\bibliographystyle{amsalpha}

\printindex

\end{document}